\newcommand{\assign}{:=}
\newcommand{\backassign}{=:}
\newcommand{\cdummy}{\cdot}
\newcommand{\mathD}{\mathrm{D}}
\newcommand{\mathd}{\mathrm{d}}
\newcommand{\precprec}{\prec\!\!\!\prec}
\newcommand{\tmcolor}[2]{{\color{#1}{#2}}}
\newcommand{\tmmathbf}[1]{\ensuremath{\boldsymbol{#1}}}
\newcommand{\tmop}[1]{\ensuremath{\operatorname{#1}}}
\newcommand{\tmrsup}[1]{\textsuperscript{#1}}
\newcommand{\tmtextit}[1]{{\itshape{#1}}}
\newenvironment{proof}{\noindent\textbf{Proof\ }}{\hspace*{\fill}$\Box$\medskip}
\newtheorem{theorem}{Theorem}[section]
\newtheorem{lemma}[theorem]{Lemma}
\newtheorem{proposition}[theorem]{Proposition}
\newtheorem{corollary}[theorem]{Corollary}
{\theorembodyfont{\rmfamily}\newtheorem{remark}[theorem]{Remark}}
\numberwithin{equation}{section}
\newcommand{\tmkeywords}{\textbf{Keywords:} }
\newcommand{^{\resizebox{0.4em}{!}{
\begin{tikzpicture}
\pgfpathmoveto{\pgfqpoint{0cm}{0cm}}
\pgfpathlineto{\pgfqpoint{0.776cm}{0cm}}
\pgfpathlineto{\pgfqpoint{0.776cm}{0.953cm}}
\pgfpathlineto{\pgfqpoint{0cm}{0.953cm}}
\pgfpathclose
\pgfusepath{clip}
\begin{pgfscope}
\begin{pgfscope}
\pgfpathmoveto{\pgfqpoint{0cm}{0cm}}
\pgfpathlineto{\pgfqpoint{0.776cm}{0cm}}
\pgfpathlineto{\pgfqpoint{0.776cm}{0.953cm}}
\pgfpathlineto{\pgfqpoint{0cm}{0.953cm}}
\pgfpathclose
\pgfusepath{clip}
\begin{pgfscope}
\definecolor{eps2pgf_color}{gray}{1}\pgfsetstrokecolor{eps2pgf_color}\pgfsetfillcolor{eps2pgf_color}
\pgfpathmoveto{\pgfqpoint{0.273cm}{0.8cm}}
\pgfpathcurveto{\pgfqpoint{0.273cm}{0.837cm}}{\pgfqpoint{0.259cm}{0.871cm}}{\pgfqpoint{0.233cm}{0.897cm}}
\pgfpathcurveto{\pgfqpoint{0.207cm}{0.923cm}}{\pgfqpoint{0.173cm}{0.937cm}}{\pgfqpoint{0.137cm}{0.937cm}}
\pgfpathcurveto{\pgfqpoint{0.1cm}{0.937cm}}{\pgfqpoint{0.066cm}{0.923cm}}{\pgfqpoint{0.04cm}{0.897cm}}
\pgfpathcurveto{\pgfqpoint{0.014cm}{0.871cm}}{\pgfqpoint{0cm}{0.837cm}}{\pgfqpoint{0cm}{0.8cm}}
\pgfpathcurveto{\pgfqpoint{0cm}{0.764cm}}{\pgfqpoint{0.014cm}{0.729cm}}{\pgfqpoint{0.04cm}{0.703cm}}
\pgfpathcurveto{\pgfqpoint{0.066cm}{0.678cm}}{\pgfqpoint{0.1cm}{0.663cm}}{\pgfqpoint{0.137cm}{0.663cm}}
\pgfpathcurveto{\pgfqpoint{0.173cm}{0.663cm}}{\pgfqpoint{0.207cm}{0.678cm}}{\pgfqpoint{0.233cm}{0.703cm}}
\pgfpathcurveto{\pgfqpoint{0.259cm}{0.729cm}}{\pgfqpoint{0.273cm}{0.764cm}}{\pgfqpoint{0.273cm}{0.8cm}}
\pgfusepath{fill}
\begin{pgfscope}
\pgfsetdash{}{0cm}
\pgfsetlinewidth{0.818mm}
\pgfsetroundcap
\pgfsetmiterlimit{7.0}
\definecolor{eps2pgf_color}{gray}{0}\pgfsetstrokecolor{eps2pgf_color}\pgfsetfillcolor{eps2pgf_color}
\pgfpathmoveto{\pgfqpoint{0.249cm}{0.064cm}}
\pgfpathlineto{\pgfqpoint{0.246cm}{0.811cm}}
\pgfusepath{stroke}
\end{pgfscope}
\definecolor{eps2pgf_color}{gray}{0}\pgfsetstrokecolor{eps2pgf_color}\pgfsetfillcolor{eps2pgf_color}
\pgfpathmoveto{\pgfqpoint{0.382cm}{0.792cm}}
\pgfpathcurveto{\pgfqpoint{0.382cm}{0.828cm}}{\pgfqpoint{0.368cm}{0.863cm}}{\pgfqpoint{0.342cm}{0.889cm}}
\pgfpathcurveto{\pgfqpoint{0.317cm}{0.914cm}}{\pgfqpoint{0.282cm}{0.929cm}}{\pgfqpoint{0.246cm}{0.929cm}}
\pgfpathcurveto{\pgfqpoint{0.21cm}{0.929cm}}{\pgfqpoint{0.175cm}{0.914cm}}{\pgfqpoint{0.149cm}{0.889cm}}
\pgfpathcurveto{\pgfqpoint{0.124cm}{0.863cm}}{\pgfqpoint{0.109cm}{0.828cm}}{\pgfqpoint{0.109cm}{0.792cm}}
\pgfpathcurveto{\pgfqpoint{0.109cm}{0.755cm}}{\pgfqpoint{0.124cm}{0.721cm}}{\pgfqpoint{0.149cm}{0.695cm}}
\pgfpathcurveto{\pgfqpoint{0.175cm}{0.669cm}}{\pgfqpoint{0.21cm}{0.655cm}}{\pgfqpoint{0.246cm}{0.655cm}}
\pgfpathcurveto{\pgfqpoint{0.282cm}{0.655cm}}{\pgfqpoint{0.317cm}{0.669cm}}{\pgfqpoint{0.342cm}{0.695cm}}
\pgfpathcurveto{\pgfqpoint{0.368cm}{0.721cm}}{\pgfqpoint{0.382cm}{0.755cm}}{\pgfqpoint{0.382cm}{0.792cm}}
\pgfusepath{fill}
\definecolor{eps2pgf_color}{gray}{1}\pgfsetstrokecolor{eps2pgf_color}\pgfsetfillcolor{eps2pgf_color}
\pgfpathmoveto{\pgfqpoint{0.774cm}{0.776cm}}
\pgfpathcurveto{\pgfqpoint{0.774cm}{0.812cm}}{\pgfqpoint{0.759cm}{0.847cm}}{\pgfqpoint{0.734cm}{0.873cm}}
\pgfpathcurveto{\pgfqpoint{0.708cm}{0.899cm}}{\pgfqpoint{0.673cm}{0.913cm}}{\pgfqpoint{0.637cm}{0.913cm}}
\pgfpathcurveto{\pgfqpoint{0.601cm}{0.913cm}}{\pgfqpoint{0.566cm}{0.899cm}}{\pgfqpoint{0.541cm}{0.873cm}}
\pgfpathcurveto{\pgfqpoint{0.515cm}{0.847cm}}{\pgfqpoint{0.501cm}{0.812cm}}{\pgfqpoint{0.501cm}{0.776cm}}
\pgfpathcurveto{\pgfqpoint{0.501cm}{0.74cm}}{\pgfqpoint{0.515cm}{0.705cm}}{\pgfqpoint{0.541cm}{0.679cm}}
\pgfpathcurveto{\pgfqpoint{0.566cm}{0.654cm}}{\pgfqpoint{0.601cm}{0.639cm}}{\pgfqpoint{0.637cm}{0.639cm}}
\pgfpathcurveto{\pgfqpoint{0.673cm}{0.639cm}}{\pgfqpoint{0.708cm}{0.654cm}}{\pgfqpoint{0.734cm}{0.679cm}}
\pgfpathcurveto{\pgfqpoint{0.759cm}{0.705cm}}{\pgfqpoint{0.774cm}{0.74cm}}{\pgfqpoint{0.774cm}{0.776cm}}
\pgfusepath{fill}
\end{pgfscope}
\end{pgfscope}
\end{pgfscope}
\end{tikzpicture}}}}[1]{#1^{\resizebox{0.4em}{!}{
\begin{tikzpicture}
\pgfpathmoveto{\pgfqpoint{0cm}{0cm}}
\pgfpathlineto{\pgfqpoint{0.776cm}{0cm}}
\pgfpathlineto{\pgfqpoint{0.776cm}{0.953cm}}
\pgfpathlineto{\pgfqpoint{0cm}{0.953cm}}
\pgfpathclose
\pgfusepath{clip}
\begin{pgfscope}
\begin{pgfscope}
\pgfpathmoveto{\pgfqpoint{0cm}{0cm}}
\pgfpathlineto{\pgfqpoint{0.776cm}{0cm}}
\pgfpathlineto{\pgfqpoint{0.776cm}{0.953cm}}
\pgfpathlineto{\pgfqpoint{0cm}{0.953cm}}
\pgfpathclose
\pgfusepath{clip}
\begin{pgfscope}
\definecolor{eps2pgf_color}{gray}{1}\pgfsetstrokecolor{eps2pgf_color}\pgfsetfillcolor{eps2pgf_color}
\pgfpathmoveto{\pgfqpoint{0.273cm}{0.8cm}}
\pgfpathcurveto{\pgfqpoint{0.273cm}{0.837cm}}{\pgfqpoint{0.259cm}{0.871cm}}{\pgfqpoint{0.233cm}{0.897cm}}
\pgfpathcurveto{\pgfqpoint{0.207cm}{0.923cm}}{\pgfqpoint{0.173cm}{0.937cm}}{\pgfqpoint{0.137cm}{0.937cm}}
\pgfpathcurveto{\pgfqpoint{0.1cm}{0.937cm}}{\pgfqpoint{0.066cm}{0.923cm}}{\pgfqpoint{0.04cm}{0.897cm}}
\pgfpathcurveto{\pgfqpoint{0.014cm}{0.871cm}}{\pgfqpoint{0cm}{0.837cm}}{\pgfqpoint{0cm}{0.8cm}}
\pgfpathcurveto{\pgfqpoint{0cm}{0.764cm}}{\pgfqpoint{0.014cm}{0.729cm}}{\pgfqpoint{0.04cm}{0.703cm}}
\pgfpathcurveto{\pgfqpoint{0.066cm}{0.678cm}}{\pgfqpoint{0.1cm}{0.663cm}}{\pgfqpoint{0.137cm}{0.663cm}}
\pgfpathcurveto{\pgfqpoint{0.173cm}{0.663cm}}{\pgfqpoint{0.207cm}{0.678cm}}{\pgfqpoint{0.233cm}{0.703cm}}
\pgfpathcurveto{\pgfqpoint{0.259cm}{0.729cm}}{\pgfqpoint{0.273cm}{0.764cm}}{\pgfqpoint{0.273cm}{0.8cm}}
\pgfusepath{fill}
\begin{pgfscope}
\pgfsetdash{}{0cm}
\pgfsetlinewidth{0.818mm}
\pgfsetroundcap
\pgfsetmiterlimit{7.0}
\definecolor{eps2pgf_color}{gray}{0}\pgfsetstrokecolor{eps2pgf_color}\pgfsetfillcolor{eps2pgf_color}
\pgfpathmoveto{\pgfqpoint{0.249cm}{0.064cm}}
\pgfpathlineto{\pgfqpoint{0.246cm}{0.811cm}}
\pgfusepath{stroke}
\end{pgfscope}
\definecolor{eps2pgf_color}{gray}{0}\pgfsetstrokecolor{eps2pgf_color}\pgfsetfillcolor{eps2pgf_color}
\pgfpathmoveto{\pgfqpoint{0.382cm}{0.792cm}}
\pgfpathcurveto{\pgfqpoint{0.382cm}{0.828cm}}{\pgfqpoint{0.368cm}{0.863cm}}{\pgfqpoint{0.342cm}{0.889cm}}
\pgfpathcurveto{\pgfqpoint{0.317cm}{0.914cm}}{\pgfqpoint{0.282cm}{0.929cm}}{\pgfqpoint{0.246cm}{0.929cm}}
\pgfpathcurveto{\pgfqpoint{0.21cm}{0.929cm}}{\pgfqpoint{0.175cm}{0.914cm}}{\pgfqpoint{0.149cm}{0.889cm}}
\pgfpathcurveto{\pgfqpoint{0.124cm}{0.863cm}}{\pgfqpoint{0.109cm}{0.828cm}}{\pgfqpoint{0.109cm}{0.792cm}}
\pgfpathcurveto{\pgfqpoint{0.109cm}{0.755cm}}{\pgfqpoint{0.124cm}{0.721cm}}{\pgfqpoint{0.149cm}{0.695cm}}
\pgfpathcurveto{\pgfqpoint{0.175cm}{0.669cm}}{\pgfqpoint{0.21cm}{0.655cm}}{\pgfqpoint{0.246cm}{0.655cm}}
\pgfpathcurveto{\pgfqpoint{0.282cm}{0.655cm}}{\pgfqpoint{0.317cm}{0.669cm}}{\pgfqpoint{0.342cm}{0.695cm}}
\pgfpathcurveto{\pgfqpoint{0.368cm}{0.721cm}}{\pgfqpoint{0.382cm}{0.755cm}}{\pgfqpoint{0.382cm}{0.792cm}}
\pgfusepath{fill}
\definecolor{eps2pgf_color}{gray}{1}\pgfsetstrokecolor{eps2pgf_color}\pgfsetfillcolor{eps2pgf_color}
\pgfpathmoveto{\pgfqpoint{0.774cm}{0.776cm}}
\pgfpathcurveto{\pgfqpoint{0.774cm}{0.812cm}}{\pgfqpoint{0.759cm}{0.847cm}}{\pgfqpoint{0.734cm}{0.873cm}}
\pgfpathcurveto{\pgfqpoint{0.708cm}{0.899cm}}{\pgfqpoint{0.673cm}{0.913cm}}{\pgfqpoint{0.637cm}{0.913cm}}
\pgfpathcurveto{\pgfqpoint{0.601cm}{0.913cm}}{\pgfqpoint{0.566cm}{0.899cm}}{\pgfqpoint{0.541cm}{0.873cm}}
\pgfpathcurveto{\pgfqpoint{0.515cm}{0.847cm}}{\pgfqpoint{0.501cm}{0.812cm}}{\pgfqpoint{0.501cm}{0.776cm}}
\pgfpathcurveto{\pgfqpoint{0.501cm}{0.74cm}}{\pgfqpoint{0.515cm}{0.705cm}}{\pgfqpoint{0.541cm}{0.679cm}}
\pgfpathcurveto{\pgfqpoint{0.566cm}{0.654cm}}{\pgfqpoint{0.601cm}{0.639cm}}{\pgfqpoint{0.637cm}{0.639cm}}
\pgfpathcurveto{\pgfqpoint{0.673cm}{0.639cm}}{\pgfqpoint{0.708cm}{0.654cm}}{\pgfqpoint{0.734cm}{0.679cm}}
\pgfpathcurveto{\pgfqpoint{0.759cm}{0.705cm}}{\pgfqpoint{0.774cm}{0.74cm}}{\pgfqpoint{0.774cm}{0.776cm}}
\pgfusepath{fill}
\end{pgfscope}
\end{pgfscope}
\end{pgfscope}
\end{tikzpicture}}}}
\newcommand{^{\resizebox{0.6em}{!}{
\begin{tikzpicture}
\pgfpathmoveto{\pgfqpoint{0cm}{-0.035cm}}
\pgfpathlineto{\pgfqpoint{1.376cm}{-0.035cm}}
\pgfpathlineto{\pgfqpoint{1.376cm}{0.917cm}}
\pgfpathlineto{\pgfqpoint{0cm}{0.917cm}}
\pgfpathclose
\pgfusepath{clip}
\begin{pgfscope}
\begin{pgfscope}
\pgfpathmoveto{\pgfqpoint{0cm}{-0.035cm}}
\pgfpathlineto{\pgfqpoint{1.376cm}{-0.035cm}}
\pgfpathlineto{\pgfqpoint{1.376cm}{0.917cm}}
\pgfpathlineto{\pgfqpoint{0cm}{0.917cm}}
\pgfpathclose
\pgfusepath{clip}
\begin{pgfscope}
\begin{pgfscope}
\pgfsetdash{}{0cm}
\pgfsetlinewidth{0.818mm}
\pgfsetroundcap
\pgfsetroundjoin
\pgfsetmiterlimit{7.0}
\definecolor{eps2pgf_color}{gray}{0}\pgfsetstrokecolor{eps2pgf_color}\pgfsetfillcolor{eps2pgf_color}
\pgfpathmoveto{\pgfqpoint{0.117cm}{0.791cm}}
\pgfpathlineto{\pgfqpoint{0.682cm}{0.041cm}}
\pgfpathlineto{\pgfqpoint{1.246cm}{0.791cm}}
\pgfusepath{stroke}
\end{pgfscope}
\definecolor{eps2pgf_color}{gray}{0}\pgfsetstrokecolor{eps2pgf_color}\pgfsetfillcolor{eps2pgf_color}
\pgfpathmoveto{\pgfqpoint{0.273cm}{0.765cm}}
\pgfpathcurveto{\pgfqpoint{0.273cm}{0.801cm}}{\pgfqpoint{0.259cm}{0.836cm}}{\pgfqpoint{0.233cm}{0.862cm}}
\pgfpathcurveto{\pgfqpoint{0.207cm}{0.888cm}}{\pgfqpoint{0.173cm}{0.902cm}}{\pgfqpoint{0.137cm}{0.902cm}}
\pgfpathcurveto{\pgfqpoint{0.1cm}{0.902cm}}{\pgfqpoint{0.066cm}{0.888cm}}{\pgfqpoint{0.04cm}{0.862cm}}
\pgfpathcurveto{\pgfqpoint{0.014cm}{0.836cm}}{\pgfqpoint{0cm}{0.801cm}}{\pgfqpoint{0cm}{0.765cm}}
\pgfpathcurveto{\pgfqpoint{0cm}{0.729cm}}{\pgfqpoint{0.014cm}{0.694cm}}{\pgfqpoint{0.04cm}{0.668cm}}
\pgfpathcurveto{\pgfqpoint{0.066cm}{0.643cm}}{\pgfqpoint{0.1cm}{0.628cm}}{\pgfqpoint{0.137cm}{0.628cm}}
\pgfpathcurveto{\pgfqpoint{0.173cm}{0.628cm}}{\pgfqpoint{0.207cm}{0.643cm}}{\pgfqpoint{0.233cm}{0.668cm}}
\pgfpathcurveto{\pgfqpoint{0.259cm}{0.694cm}}{\pgfqpoint{0.273cm}{0.729cm}}{\pgfqpoint{0.273cm}{0.765cm}}
\pgfusepath{fill}
\begin{pgfscope}
\pgfsetdash{}{0cm}
\pgfsetlinewidth{0.818mm}
\pgfsetmiterlimit{7.0}
\pgfpathmoveto{\pgfqpoint{0.682cm}{0.041cm}}
\pgfpathlineto{\pgfqpoint{0.679cm}{0.788cm}}
\pgfusepath{stroke}
\end{pgfscope}
\pgfpathmoveto{\pgfqpoint{0.815cm}{0.769cm}}
\pgfpathcurveto{\pgfqpoint{0.815cm}{0.805cm}}{\pgfqpoint{0.801cm}{0.84cm}}{\pgfqpoint{0.775cm}{0.866cm}}
\pgfpathcurveto{\pgfqpoint{0.75cm}{0.891cm}}{\pgfqpoint{0.715cm}{0.906cm}}{\pgfqpoint{0.679cm}{0.906cm}}
\pgfpathcurveto{\pgfqpoint{0.643cm}{0.906cm}}{\pgfqpoint{0.608cm}{0.891cm}}{\pgfqpoint{0.582cm}{0.866cm}}
\pgfpathcurveto{\pgfqpoint{0.557cm}{0.84cm}}{\pgfqpoint{0.542cm}{0.805cm}}{\pgfqpoint{0.542cm}{0.769cm}}
\pgfpathcurveto{\pgfqpoint{0.542cm}{0.732cm}}{\pgfqpoint{0.557cm}{0.698cm}}{\pgfqpoint{0.582cm}{0.672cm}}
\pgfpathcurveto{\pgfqpoint{0.608cm}{0.646cm}}{\pgfqpoint{0.643cm}{0.632cm}}{\pgfqpoint{0.679cm}{0.632cm}}
\pgfpathcurveto{\pgfqpoint{0.715cm}{0.632cm}}{\pgfqpoint{0.75cm}{0.646cm}}{\pgfqpoint{0.775cm}{0.672cm}}
\pgfpathcurveto{\pgfqpoint{0.801cm}{0.698cm}}{\pgfqpoint{0.815cm}{0.732cm}}{\pgfqpoint{0.815cm}{0.769cm}}
\pgfusepath{fill}
\pgfpathmoveto{\pgfqpoint{1.345cm}{0.741cm}}
\pgfpathcurveto{\pgfqpoint{1.345cm}{0.777cm}}{\pgfqpoint{1.331cm}{0.812cm}}{\pgfqpoint{1.305cm}{0.838cm}}
\pgfpathcurveto{\pgfqpoint{1.28cm}{0.863cm}}{\pgfqpoint{1.245cm}{0.878cm}}{\pgfqpoint{1.209cm}{0.878cm}}
\pgfpathcurveto{\pgfqpoint{1.172cm}{0.878cm}}{\pgfqpoint{1.138cm}{0.863cm}}{\pgfqpoint{1.112cm}{0.838cm}}
\pgfpathcurveto{\pgfqpoint{1.087cm}{0.812cm}}{\pgfqpoint{1.072cm}{0.777cm}}{\pgfqpoint{1.072cm}{0.741cm}}
\pgfpathcurveto{\pgfqpoint{1.072cm}{0.704cm}}{\pgfqpoint{1.087cm}{0.67cm}}{\pgfqpoint{1.112cm}{0.644cm}}
\pgfpathcurveto{\pgfqpoint{1.138cm}{0.618cm}}{\pgfqpoint{1.172cm}{0.604cm}}{\pgfqpoint{1.209cm}{0.604cm}}
\pgfpathcurveto{\pgfqpoint{1.245cm}{0.604cm}}{\pgfqpoint{1.28cm}{0.618cm}}{\pgfqpoint{1.305cm}{0.644cm}}
\pgfpathcurveto{\pgfqpoint{1.331cm}{0.67cm}}{\pgfqpoint{1.345cm}{0.704cm}}{\pgfqpoint{1.345cm}{0.741cm}}
\pgfusepath{fill}
\end{pgfscope}
\end{pgfscope}
\end{pgfscope}
\end{tikzpicture}}}}[1]{#1^{\resizebox{0.6em}{!}{
\begin{tikzpicture}
\pgfpathmoveto{\pgfqpoint{0cm}{-0.035cm}}
\pgfpathlineto{\pgfqpoint{1.376cm}{-0.035cm}}
\pgfpathlineto{\pgfqpoint{1.376cm}{0.917cm}}
\pgfpathlineto{\pgfqpoint{0cm}{0.917cm}}
\pgfpathclose
\pgfusepath{clip}
\begin{pgfscope}
\begin{pgfscope}
\pgfpathmoveto{\pgfqpoint{0cm}{-0.035cm}}
\pgfpathlineto{\pgfqpoint{1.376cm}{-0.035cm}}
\pgfpathlineto{\pgfqpoint{1.376cm}{0.917cm}}
\pgfpathlineto{\pgfqpoint{0cm}{0.917cm}}
\pgfpathclose
\pgfusepath{clip}
\begin{pgfscope}
\begin{pgfscope}
\pgfsetdash{}{0cm}
\pgfsetlinewidth{0.818mm}
\pgfsetroundcap
\pgfsetroundjoin
\pgfsetmiterlimit{7.0}
\definecolor{eps2pgf_color}{gray}{0}\pgfsetstrokecolor{eps2pgf_color}\pgfsetfillcolor{eps2pgf_color}
\pgfpathmoveto{\pgfqpoint{0.117cm}{0.791cm}}
\pgfpathlineto{\pgfqpoint{0.682cm}{0.041cm}}
\pgfpathlineto{\pgfqpoint{1.246cm}{0.791cm}}
\pgfusepath{stroke}
\end{pgfscope}
\definecolor{eps2pgf_color}{gray}{0}\pgfsetstrokecolor{eps2pgf_color}\pgfsetfillcolor{eps2pgf_color}
\pgfpathmoveto{\pgfqpoint{0.273cm}{0.765cm}}
\pgfpathcurveto{\pgfqpoint{0.273cm}{0.801cm}}{\pgfqpoint{0.259cm}{0.836cm}}{\pgfqpoint{0.233cm}{0.862cm}}
\pgfpathcurveto{\pgfqpoint{0.207cm}{0.888cm}}{\pgfqpoint{0.173cm}{0.902cm}}{\pgfqpoint{0.137cm}{0.902cm}}
\pgfpathcurveto{\pgfqpoint{0.1cm}{0.902cm}}{\pgfqpoint{0.066cm}{0.888cm}}{\pgfqpoint{0.04cm}{0.862cm}}
\pgfpathcurveto{\pgfqpoint{0.014cm}{0.836cm}}{\pgfqpoint{0cm}{0.801cm}}{\pgfqpoint{0cm}{0.765cm}}
\pgfpathcurveto{\pgfqpoint{0cm}{0.729cm}}{\pgfqpoint{0.014cm}{0.694cm}}{\pgfqpoint{0.04cm}{0.668cm}}
\pgfpathcurveto{\pgfqpoint{0.066cm}{0.643cm}}{\pgfqpoint{0.1cm}{0.628cm}}{\pgfqpoint{0.137cm}{0.628cm}}
\pgfpathcurveto{\pgfqpoint{0.173cm}{0.628cm}}{\pgfqpoint{0.207cm}{0.643cm}}{\pgfqpoint{0.233cm}{0.668cm}}
\pgfpathcurveto{\pgfqpoint{0.259cm}{0.694cm}}{\pgfqpoint{0.273cm}{0.729cm}}{\pgfqpoint{0.273cm}{0.765cm}}
\pgfusepath{fill}
\begin{pgfscope}
\pgfsetdash{}{0cm}
\pgfsetlinewidth{0.818mm}
\pgfsetmiterlimit{7.0}
\pgfpathmoveto{\pgfqpoint{0.682cm}{0.041cm}}
\pgfpathlineto{\pgfqpoint{0.679cm}{0.788cm}}
\pgfusepath{stroke}
\end{pgfscope}
\pgfpathmoveto{\pgfqpoint{0.815cm}{0.769cm}}
\pgfpathcurveto{\pgfqpoint{0.815cm}{0.805cm}}{\pgfqpoint{0.801cm}{0.84cm}}{\pgfqpoint{0.775cm}{0.866cm}}
\pgfpathcurveto{\pgfqpoint{0.75cm}{0.891cm}}{\pgfqpoint{0.715cm}{0.906cm}}{\pgfqpoint{0.679cm}{0.906cm}}
\pgfpathcurveto{\pgfqpoint{0.643cm}{0.906cm}}{\pgfqpoint{0.608cm}{0.891cm}}{\pgfqpoint{0.582cm}{0.866cm}}
\pgfpathcurveto{\pgfqpoint{0.557cm}{0.84cm}}{\pgfqpoint{0.542cm}{0.805cm}}{\pgfqpoint{0.542cm}{0.769cm}}
\pgfpathcurveto{\pgfqpoint{0.542cm}{0.732cm}}{\pgfqpoint{0.557cm}{0.698cm}}{\pgfqpoint{0.582cm}{0.672cm}}
\pgfpathcurveto{\pgfqpoint{0.608cm}{0.646cm}}{\pgfqpoint{0.643cm}{0.632cm}}{\pgfqpoint{0.679cm}{0.632cm}}
\pgfpathcurveto{\pgfqpoint{0.715cm}{0.632cm}}{\pgfqpoint{0.75cm}{0.646cm}}{\pgfqpoint{0.775cm}{0.672cm}}
\pgfpathcurveto{\pgfqpoint{0.801cm}{0.698cm}}{\pgfqpoint{0.815cm}{0.732cm}}{\pgfqpoint{0.815cm}{0.769cm}}
\pgfusepath{fill}
\pgfpathmoveto{\pgfqpoint{1.345cm}{0.741cm}}
\pgfpathcurveto{\pgfqpoint{1.345cm}{0.777cm}}{\pgfqpoint{1.331cm}{0.812cm}}{\pgfqpoint{1.305cm}{0.838cm}}
\pgfpathcurveto{\pgfqpoint{1.28cm}{0.863cm}}{\pgfqpoint{1.245cm}{0.878cm}}{\pgfqpoint{1.209cm}{0.878cm}}
\pgfpathcurveto{\pgfqpoint{1.172cm}{0.878cm}}{\pgfqpoint{1.138cm}{0.863cm}}{\pgfqpoint{1.112cm}{0.838cm}}
\pgfpathcurveto{\pgfqpoint{1.087cm}{0.812cm}}{\pgfqpoint{1.072cm}{0.777cm}}{\pgfqpoint{1.072cm}{0.741cm}}
\pgfpathcurveto{\pgfqpoint{1.072cm}{0.704cm}}{\pgfqpoint{1.087cm}{0.67cm}}{\pgfqpoint{1.112cm}{0.644cm}}
\pgfpathcurveto{\pgfqpoint{1.138cm}{0.618cm}}{\pgfqpoint{1.172cm}{0.604cm}}{\pgfqpoint{1.209cm}{0.604cm}}
\pgfpathcurveto{\pgfqpoint{1.245cm}{0.604cm}}{\pgfqpoint{1.28cm}{0.618cm}}{\pgfqpoint{1.305cm}{0.644cm}}
\pgfpathcurveto{\pgfqpoint{1.331cm}{0.67cm}}{\pgfqpoint{1.345cm}{0.704cm}}{\pgfqpoint{1.345cm}{0.741cm}}
\pgfusepath{fill}
\end{pgfscope}
\end{pgfscope}
\end{pgfscope}
\end{tikzpicture}}}}
\newcommand{^{\!\resizebox{0.6em}{!}{
\begin{tikzpicture}
\pgfpathmoveto{\pgfqpoint{0cm}{-0.035cm}}
\pgfpathlineto{\pgfqpoint{1.376cm}{-0.035cm}}
\pgfpathlineto{\pgfqpoint{1.376cm}{0.917cm}}
\pgfpathlineto{\pgfqpoint{0cm}{0.917cm}}
\pgfpathclose
\pgfusepath{clip}
\begin{pgfscope}
\begin{pgfscope}
\pgfpathmoveto{\pgfqpoint{0cm}{-0.035cm}}
\pgfpathlineto{\pgfqpoint{1.376cm}{-0.035cm}}
\pgfpathlineto{\pgfqpoint{1.376cm}{0.917cm}}
\pgfpathlineto{\pgfqpoint{0cm}{0.917cm}}
\pgfpathclose
\pgfusepath{clip}
\begin{pgfscope}
\begin{pgfscope}
\pgfsetdash{}{0cm}
\pgfsetlinewidth{0.818mm}
\pgfsetroundcap
\pgfsetroundjoin
\pgfsetmiterlimit{7.0}
\definecolor{eps2pgf_color}{gray}{0}\pgfsetstrokecolor{eps2pgf_color}\pgfsetfillcolor{eps2pgf_color}
\pgfpathmoveto{\pgfqpoint{0.117cm}{0.791cm}}
\pgfpathlineto{\pgfqpoint{0.682cm}{0.041cm}}
\pgfpathlineto{\pgfqpoint{1.246cm}{0.791cm}}
\pgfusepath{stroke}
\end{pgfscope}
\definecolor{eps2pgf_color}{gray}{0}\pgfsetstrokecolor{eps2pgf_color}\pgfsetfillcolor{eps2pgf_color}
\pgfpathmoveto{\pgfqpoint{0.273cm}{0.765cm}}
\pgfpathcurveto{\pgfqpoint{0.273cm}{0.801cm}}{\pgfqpoint{0.259cm}{0.836cm}}{\pgfqpoint{0.233cm}{0.862cm}}
\pgfpathcurveto{\pgfqpoint{0.207cm}{0.888cm}}{\pgfqpoint{0.173cm}{0.902cm}}{\pgfqpoint{0.137cm}{0.902cm}}
\pgfpathcurveto{\pgfqpoint{0.1cm}{0.902cm}}{\pgfqpoint{0.066cm}{0.888cm}}{\pgfqpoint{0.04cm}{0.862cm}}
\pgfpathcurveto{\pgfqpoint{0.014cm}{0.836cm}}{\pgfqpoint{0cm}{0.801cm}}{\pgfqpoint{0cm}{0.765cm}}
\pgfpathcurveto{\pgfqpoint{0cm}{0.729cm}}{\pgfqpoint{0.014cm}{0.694cm}}{\pgfqpoint{0.04cm}{0.668cm}}
\pgfpathcurveto{\pgfqpoint{0.066cm}{0.643cm}}{\pgfqpoint{0.1cm}{0.628cm}}{\pgfqpoint{0.137cm}{0.628cm}}
\pgfpathcurveto{\pgfqpoint{0.173cm}{0.628cm}}{\pgfqpoint{0.207cm}{0.643cm}}{\pgfqpoint{0.233cm}{0.668cm}}
\pgfpathcurveto{\pgfqpoint{0.259cm}{0.694cm}}{\pgfqpoint{0.273cm}{0.729cm}}{\pgfqpoint{0.273cm}{0.765cm}}
\pgfusepath{fill}
\pgfpathmoveto{\pgfqpoint{1.345cm}{0.741cm}}
\pgfpathcurveto{\pgfqpoint{1.345cm}{0.777cm}}{\pgfqpoint{1.331cm}{0.812cm}}{\pgfqpoint{1.305cm}{0.838cm}}
\pgfpathcurveto{\pgfqpoint{1.28cm}{0.863cm}}{\pgfqpoint{1.245cm}{0.878cm}}{\pgfqpoint{1.209cm}{0.878cm}}
\pgfpathcurveto{\pgfqpoint{1.172cm}{0.878cm}}{\pgfqpoint{1.138cm}{0.863cm}}{\pgfqpoint{1.112cm}{0.838cm}}
\pgfpathcurveto{\pgfqpoint{1.087cm}{0.812cm}}{\pgfqpoint{1.072cm}{0.777cm}}{\pgfqpoint{1.072cm}{0.741cm}}
\pgfpathcurveto{\pgfqpoint{1.072cm}{0.704cm}}{\pgfqpoint{1.087cm}{0.67cm}}{\pgfqpoint{1.112cm}{0.644cm}}
\pgfpathcurveto{\pgfqpoint{1.138cm}{0.618cm}}{\pgfqpoint{1.172cm}{0.604cm}}{\pgfqpoint{1.209cm}{0.604cm}}
\pgfpathcurveto{\pgfqpoint{1.245cm}{0.604cm}}{\pgfqpoint{1.28cm}{0.618cm}}{\pgfqpoint{1.305cm}{0.644cm}}
\pgfpathcurveto{\pgfqpoint{1.331cm}{0.67cm}}{\pgfqpoint{1.345cm}{0.704cm}}{\pgfqpoint{1.345cm}{0.741cm}}
\pgfusepath{fill}
\end{pgfscope}
\end{pgfscope}
\end{pgfscope}
\end{tikzpicture}}}}[1]{#1^{\!\resizebox{0.6em}{!}{
\begin{tikzpicture}
\pgfpathmoveto{\pgfqpoint{0cm}{-0.035cm}}
\pgfpathlineto{\pgfqpoint{1.376cm}{-0.035cm}}
\pgfpathlineto{\pgfqpoint{1.376cm}{0.917cm}}
\pgfpathlineto{\pgfqpoint{0cm}{0.917cm}}
\pgfpathclose
\pgfusepath{clip}
\begin{pgfscope}
\begin{pgfscope}
\pgfpathmoveto{\pgfqpoint{0cm}{-0.035cm}}
\pgfpathlineto{\pgfqpoint{1.376cm}{-0.035cm}}
\pgfpathlineto{\pgfqpoint{1.376cm}{0.917cm}}
\pgfpathlineto{\pgfqpoint{0cm}{0.917cm}}
\pgfpathclose
\pgfusepath{clip}
\begin{pgfscope}
\begin{pgfscope}
\pgfsetdash{}{0cm}
\pgfsetlinewidth{0.818mm}
\pgfsetroundcap
\pgfsetroundjoin
\pgfsetmiterlimit{7.0}
\definecolor{eps2pgf_color}{gray}{0}\pgfsetstrokecolor{eps2pgf_color}\pgfsetfillcolor{eps2pgf_color}
\pgfpathmoveto{\pgfqpoint{0.117cm}{0.791cm}}
\pgfpathlineto{\pgfqpoint{0.682cm}{0.041cm}}
\pgfpathlineto{\pgfqpoint{1.246cm}{0.791cm}}
\pgfusepath{stroke}
\end{pgfscope}
\definecolor{eps2pgf_color}{gray}{0}\pgfsetstrokecolor{eps2pgf_color}\pgfsetfillcolor{eps2pgf_color}
\pgfpathmoveto{\pgfqpoint{0.273cm}{0.765cm}}
\pgfpathcurveto{\pgfqpoint{0.273cm}{0.801cm}}{\pgfqpoint{0.259cm}{0.836cm}}{\pgfqpoint{0.233cm}{0.862cm}}
\pgfpathcurveto{\pgfqpoint{0.207cm}{0.888cm}}{\pgfqpoint{0.173cm}{0.902cm}}{\pgfqpoint{0.137cm}{0.902cm}}
\pgfpathcurveto{\pgfqpoint{0.1cm}{0.902cm}}{\pgfqpoint{0.066cm}{0.888cm}}{\pgfqpoint{0.04cm}{0.862cm}}
\pgfpathcurveto{\pgfqpoint{0.014cm}{0.836cm}}{\pgfqpoint{0cm}{0.801cm}}{\pgfqpoint{0cm}{0.765cm}}
\pgfpathcurveto{\pgfqpoint{0cm}{0.729cm}}{\pgfqpoint{0.014cm}{0.694cm}}{\pgfqpoint{0.04cm}{0.668cm}}
\pgfpathcurveto{\pgfqpoint{0.066cm}{0.643cm}}{\pgfqpoint{0.1cm}{0.628cm}}{\pgfqpoint{0.137cm}{0.628cm}}
\pgfpathcurveto{\pgfqpoint{0.173cm}{0.628cm}}{\pgfqpoint{0.207cm}{0.643cm}}{\pgfqpoint{0.233cm}{0.668cm}}
\pgfpathcurveto{\pgfqpoint{0.259cm}{0.694cm}}{\pgfqpoint{0.273cm}{0.729cm}}{\pgfqpoint{0.273cm}{0.765cm}}
\pgfusepath{fill}
\pgfpathmoveto{\pgfqpoint{1.345cm}{0.741cm}}
\pgfpathcurveto{\pgfqpoint{1.345cm}{0.777cm}}{\pgfqpoint{1.331cm}{0.812cm}}{\pgfqpoint{1.305cm}{0.838cm}}
\pgfpathcurveto{\pgfqpoint{1.28cm}{0.863cm}}{\pgfqpoint{1.245cm}{0.878cm}}{\pgfqpoint{1.209cm}{0.878cm}}
\pgfpathcurveto{\pgfqpoint{1.172cm}{0.878cm}}{\pgfqpoint{1.138cm}{0.863cm}}{\pgfqpoint{1.112cm}{0.838cm}}
\pgfpathcurveto{\pgfqpoint{1.087cm}{0.812cm}}{\pgfqpoint{1.072cm}{0.777cm}}{\pgfqpoint{1.072cm}{0.741cm}}
\pgfpathcurveto{\pgfqpoint{1.072cm}{0.704cm}}{\pgfqpoint{1.087cm}{0.67cm}}{\pgfqpoint{1.112cm}{0.644cm}}
\pgfpathcurveto{\pgfqpoint{1.138cm}{0.618cm}}{\pgfqpoint{1.172cm}{0.604cm}}{\pgfqpoint{1.209cm}{0.604cm}}
\pgfpathcurveto{\pgfqpoint{1.245cm}{0.604cm}}{\pgfqpoint{1.28cm}{0.618cm}}{\pgfqpoint{1.305cm}{0.644cm}}
\pgfpathcurveto{\pgfqpoint{1.331cm}{0.67cm}}{\pgfqpoint{1.345cm}{0.704cm}}{\pgfqpoint{1.345cm}{0.741cm}}
\pgfusepath{fill}
\end{pgfscope}
\end{pgfscope}
\end{pgfscope}
\end{tikzpicture}}}}
\newcommand{^{\!\resizebox{0.6em}{!}{
\begin{tikzpicture}
\pgfpathmoveto{\pgfqpoint{0cm}{-0.035cm}}
\pgfpathlineto{\pgfqpoint{1.376cm}{-0.035cm}}
\pgfpathlineto{\pgfqpoint{1.376cm}{1.552cm}}
\pgfpathlineto{\pgfqpoint{0cm}{1.552cm}}
\pgfpathclose
\pgfusepath{clip}
\begin{pgfscope}
\begin{pgfscope}
\pgfpathmoveto{\pgfqpoint{0cm}{-0.035cm}}
\pgfpathlineto{\pgfqpoint{1.376cm}{-0.035cm}}
\pgfpathlineto{\pgfqpoint{1.376cm}{1.552cm}}
\pgfpathlineto{\pgfqpoint{0cm}{1.552cm}}
\pgfpathclose
\pgfusepath{clip}
\begin{pgfscope}
\begin{pgfscope}
\pgfsetdash{}{0cm}
\pgfsetlinewidth{0.818mm}
\pgfsetroundcap
\pgfsetroundjoin
\pgfsetmiterlimit{7.0}
\definecolor{eps2pgf_color}{gray}{0}\pgfsetstrokecolor{eps2pgf_color}\pgfsetfillcolor{eps2pgf_color}
\pgfpathmoveto{\pgfqpoint{0.117cm}{1.421cm}}
\pgfpathlineto{\pgfqpoint{0.682cm}{0.671cm}}
\pgfpathlineto{\pgfqpoint{1.246cm}{1.421cm}}
\pgfusepath{stroke}
\end{pgfscope}
\definecolor{eps2pgf_color}{gray}{0}\pgfsetstrokecolor{eps2pgf_color}\pgfsetfillcolor{eps2pgf_color}
\pgfpathmoveto{\pgfqpoint{0.273cm}{1.395cm}}
\pgfpathcurveto{\pgfqpoint{0.273cm}{1.432cm}}{\pgfqpoint{0.259cm}{1.467cm}}{\pgfqpoint{0.233cm}{1.492cm}}
\pgfpathcurveto{\pgfqpoint{0.207cm}{1.518cm}}{\pgfqpoint{0.173cm}{1.532cm}}{\pgfqpoint{0.137cm}{1.532cm}}
\pgfpathcurveto{\pgfqpoint{0.1cm}{1.532cm}}{\pgfqpoint{0.066cm}{1.518cm}}{\pgfqpoint{0.04cm}{1.492cm}}
\pgfpathcurveto{\pgfqpoint{0.014cm}{1.467cm}}{\pgfqpoint{0cm}{1.432cm}}{\pgfqpoint{0cm}{1.395cm}}
\pgfpathcurveto{\pgfqpoint{0cm}{1.359cm}}{\pgfqpoint{0.014cm}{1.324cm}}{\pgfqpoint{0.04cm}{1.299cm}}
\pgfpathcurveto{\pgfqpoint{0.066cm}{1.273cm}}{\pgfqpoint{0.1cm}{1.258cm}}{\pgfqpoint{0.137cm}{1.258cm}}
\pgfpathcurveto{\pgfqpoint{0.173cm}{1.258cm}}{\pgfqpoint{0.207cm}{1.273cm}}{\pgfqpoint{0.233cm}{1.299cm}}
\pgfpathcurveto{\pgfqpoint{0.259cm}{1.324cm}}{\pgfqpoint{0.273cm}{1.359cm}}{\pgfqpoint{0.273cm}{1.395cm}}
\pgfusepath{fill}
\begin{pgfscope}
\pgfsetdash{}{0cm}
\pgfsetlinewidth{0.818mm}
\pgfsetmiterlimit{7.0}
\pgfpathmoveto{\pgfqpoint{0.682cm}{0.671cm}}
\pgfpathlineto{\pgfqpoint{0.679cm}{1.418cm}}
\pgfusepath{stroke}
\end{pgfscope}
\pgfpathmoveto{\pgfqpoint{0.815cm}{1.399cm}}
\pgfpathcurveto{\pgfqpoint{0.815cm}{1.435cm}}{\pgfqpoint{0.801cm}{1.47cm}}{\pgfqpoint{0.775cm}{1.496cm}}
\pgfpathcurveto{\pgfqpoint{0.75cm}{1.521cm}}{\pgfqpoint{0.715cm}{1.536cm}}{\pgfqpoint{0.679cm}{1.536cm}}
\pgfpathcurveto{\pgfqpoint{0.643cm}{1.536cm}}{\pgfqpoint{0.608cm}{1.521cm}}{\pgfqpoint{0.582cm}{1.496cm}}
\pgfpathcurveto{\pgfqpoint{0.557cm}{1.47cm}}{\pgfqpoint{0.542cm}{1.435cm}}{\pgfqpoint{0.542cm}{1.399cm}}
\pgfpathcurveto{\pgfqpoint{0.542cm}{1.363cm}}{\pgfqpoint{0.557cm}{1.328cm}}{\pgfqpoint{0.582cm}{1.302cm}}
\pgfpathcurveto{\pgfqpoint{0.608cm}{1.276cm}}{\pgfqpoint{0.643cm}{1.262cm}}{\pgfqpoint{0.679cm}{1.262cm}}
\pgfpathcurveto{\pgfqpoint{0.715cm}{1.262cm}}{\pgfqpoint{0.75cm}{1.276cm}}{\pgfqpoint{0.775cm}{1.302cm}}
\pgfpathcurveto{\pgfqpoint{0.801cm}{1.328cm}}{\pgfqpoint{0.815cm}{1.363cm}}{\pgfqpoint{0.815cm}{1.399cm}}
\pgfusepath{fill}
\pgfpathmoveto{\pgfqpoint{1.345cm}{1.371cm}}
\pgfpathcurveto{\pgfqpoint{1.345cm}{1.408cm}}{\pgfqpoint{1.331cm}{1.442cm}}{\pgfqpoint{1.305cm}{1.468cm}}
\pgfpathcurveto{\pgfqpoint{1.28cm}{1.494cm}}{\pgfqpoint{1.245cm}{1.508cm}}{\pgfqpoint{1.209cm}{1.508cm}}
\pgfpathcurveto{\pgfqpoint{1.172cm}{1.508cm}}{\pgfqpoint{1.138cm}{1.494cm}}{\pgfqpoint{1.112cm}{1.468cm}}
\pgfpathcurveto{\pgfqpoint{1.087cm}{1.442cm}}{\pgfqpoint{1.072cm}{1.408cm}}{\pgfqpoint{1.072cm}{1.371cm}}
\pgfpathcurveto{\pgfqpoint{1.072cm}{1.335cm}}{\pgfqpoint{1.087cm}{1.3cm}}{\pgfqpoint{1.112cm}{1.274cm}}
\pgfpathcurveto{\pgfqpoint{1.138cm}{1.249cm}}{\pgfqpoint{1.172cm}{1.234cm}}{\pgfqpoint{1.209cm}{1.234cm}}
\pgfpathcurveto{\pgfqpoint{1.245cm}{1.234cm}}{\pgfqpoint{1.28cm}{1.249cm}}{\pgfqpoint{1.305cm}{1.274cm}}
\pgfpathcurveto{\pgfqpoint{1.331cm}{1.3cm}}{\pgfqpoint{1.345cm}{1.335cm}}{\pgfqpoint{1.345cm}{1.371cm}}
\pgfusepath{fill}
\begin{pgfscope}
\pgfsetdash{}{0cm}
\pgfsetlinewidth{0.818mm}
\pgfsetroundcap
\pgfsetmiterlimit{4.0}
\pgfpathmoveto{\pgfqpoint{0.682cm}{0.671cm}}
\pgfpathlineto{\pgfqpoint{0.682cm}{0.042cm}}
\pgfusepath{stroke}
\end{pgfscope}
\end{pgfscope}
\end{pgfscope}
\end{pgfscope}
\end{tikzpicture}}}}[1]{#1^{\!\resizebox{0.6em}{!}{
\begin{tikzpicture}
\pgfpathmoveto{\pgfqpoint{0cm}{-0.035cm}}
\pgfpathlineto{\pgfqpoint{1.376cm}{-0.035cm}}
\pgfpathlineto{\pgfqpoint{1.376cm}{1.552cm}}
\pgfpathlineto{\pgfqpoint{0cm}{1.552cm}}
\pgfpathclose
\pgfusepath{clip}
\begin{pgfscope}
\begin{pgfscope}
\pgfpathmoveto{\pgfqpoint{0cm}{-0.035cm}}
\pgfpathlineto{\pgfqpoint{1.376cm}{-0.035cm}}
\pgfpathlineto{\pgfqpoint{1.376cm}{1.552cm}}
\pgfpathlineto{\pgfqpoint{0cm}{1.552cm}}
\pgfpathclose
\pgfusepath{clip}
\begin{pgfscope}
\begin{pgfscope}
\pgfsetdash{}{0cm}
\pgfsetlinewidth{0.818mm}
\pgfsetroundcap
\pgfsetroundjoin
\pgfsetmiterlimit{7.0}
\definecolor{eps2pgf_color}{gray}{0}\pgfsetstrokecolor{eps2pgf_color}\pgfsetfillcolor{eps2pgf_color}
\pgfpathmoveto{\pgfqpoint{0.117cm}{1.421cm}}
\pgfpathlineto{\pgfqpoint{0.682cm}{0.671cm}}
\pgfpathlineto{\pgfqpoint{1.246cm}{1.421cm}}
\pgfusepath{stroke}
\end{pgfscope}
\definecolor{eps2pgf_color}{gray}{0}\pgfsetstrokecolor{eps2pgf_color}\pgfsetfillcolor{eps2pgf_color}
\pgfpathmoveto{\pgfqpoint{0.273cm}{1.395cm}}
\pgfpathcurveto{\pgfqpoint{0.273cm}{1.432cm}}{\pgfqpoint{0.259cm}{1.467cm}}{\pgfqpoint{0.233cm}{1.492cm}}
\pgfpathcurveto{\pgfqpoint{0.207cm}{1.518cm}}{\pgfqpoint{0.173cm}{1.532cm}}{\pgfqpoint{0.137cm}{1.532cm}}
\pgfpathcurveto{\pgfqpoint{0.1cm}{1.532cm}}{\pgfqpoint{0.066cm}{1.518cm}}{\pgfqpoint{0.04cm}{1.492cm}}
\pgfpathcurveto{\pgfqpoint{0.014cm}{1.467cm}}{\pgfqpoint{0cm}{1.432cm}}{\pgfqpoint{0cm}{1.395cm}}
\pgfpathcurveto{\pgfqpoint{0cm}{1.359cm}}{\pgfqpoint{0.014cm}{1.324cm}}{\pgfqpoint{0.04cm}{1.299cm}}
\pgfpathcurveto{\pgfqpoint{0.066cm}{1.273cm}}{\pgfqpoint{0.1cm}{1.258cm}}{\pgfqpoint{0.137cm}{1.258cm}}
\pgfpathcurveto{\pgfqpoint{0.173cm}{1.258cm}}{\pgfqpoint{0.207cm}{1.273cm}}{\pgfqpoint{0.233cm}{1.299cm}}
\pgfpathcurveto{\pgfqpoint{0.259cm}{1.324cm}}{\pgfqpoint{0.273cm}{1.359cm}}{\pgfqpoint{0.273cm}{1.395cm}}
\pgfusepath{fill}
\begin{pgfscope}
\pgfsetdash{}{0cm}
\pgfsetlinewidth{0.818mm}
\pgfsetmiterlimit{7.0}
\pgfpathmoveto{\pgfqpoint{0.682cm}{0.671cm}}
\pgfpathlineto{\pgfqpoint{0.679cm}{1.418cm}}
\pgfusepath{stroke}
\end{pgfscope}
\pgfpathmoveto{\pgfqpoint{0.815cm}{1.399cm}}
\pgfpathcurveto{\pgfqpoint{0.815cm}{1.435cm}}{\pgfqpoint{0.801cm}{1.47cm}}{\pgfqpoint{0.775cm}{1.496cm}}
\pgfpathcurveto{\pgfqpoint{0.75cm}{1.521cm}}{\pgfqpoint{0.715cm}{1.536cm}}{\pgfqpoint{0.679cm}{1.536cm}}
\pgfpathcurveto{\pgfqpoint{0.643cm}{1.536cm}}{\pgfqpoint{0.608cm}{1.521cm}}{\pgfqpoint{0.582cm}{1.496cm}}
\pgfpathcurveto{\pgfqpoint{0.557cm}{1.47cm}}{\pgfqpoint{0.542cm}{1.435cm}}{\pgfqpoint{0.542cm}{1.399cm}}
\pgfpathcurveto{\pgfqpoint{0.542cm}{1.363cm}}{\pgfqpoint{0.557cm}{1.328cm}}{\pgfqpoint{0.582cm}{1.302cm}}
\pgfpathcurveto{\pgfqpoint{0.608cm}{1.276cm}}{\pgfqpoint{0.643cm}{1.262cm}}{\pgfqpoint{0.679cm}{1.262cm}}
\pgfpathcurveto{\pgfqpoint{0.715cm}{1.262cm}}{\pgfqpoint{0.75cm}{1.276cm}}{\pgfqpoint{0.775cm}{1.302cm}}
\pgfpathcurveto{\pgfqpoint{0.801cm}{1.328cm}}{\pgfqpoint{0.815cm}{1.363cm}}{\pgfqpoint{0.815cm}{1.399cm}}
\pgfusepath{fill}
\pgfpathmoveto{\pgfqpoint{1.345cm}{1.371cm}}
\pgfpathcurveto{\pgfqpoint{1.345cm}{1.408cm}}{\pgfqpoint{1.331cm}{1.442cm}}{\pgfqpoint{1.305cm}{1.468cm}}
\pgfpathcurveto{\pgfqpoint{1.28cm}{1.494cm}}{\pgfqpoint{1.245cm}{1.508cm}}{\pgfqpoint{1.209cm}{1.508cm}}
\pgfpathcurveto{\pgfqpoint{1.172cm}{1.508cm}}{\pgfqpoint{1.138cm}{1.494cm}}{\pgfqpoint{1.112cm}{1.468cm}}
\pgfpathcurveto{\pgfqpoint{1.087cm}{1.442cm}}{\pgfqpoint{1.072cm}{1.408cm}}{\pgfqpoint{1.072cm}{1.371cm}}
\pgfpathcurveto{\pgfqpoint{1.072cm}{1.335cm}}{\pgfqpoint{1.087cm}{1.3cm}}{\pgfqpoint{1.112cm}{1.274cm}}
\pgfpathcurveto{\pgfqpoint{1.138cm}{1.249cm}}{\pgfqpoint{1.172cm}{1.234cm}}{\pgfqpoint{1.209cm}{1.234cm}}
\pgfpathcurveto{\pgfqpoint{1.245cm}{1.234cm}}{\pgfqpoint{1.28cm}{1.249cm}}{\pgfqpoint{1.305cm}{1.274cm}}
\pgfpathcurveto{\pgfqpoint{1.331cm}{1.3cm}}{\pgfqpoint{1.345cm}{1.335cm}}{\pgfqpoint{1.345cm}{1.371cm}}
\pgfusepath{fill}
\begin{pgfscope}
\pgfsetdash{}{0cm}
\pgfsetlinewidth{0.818mm}
\pgfsetroundcap
\pgfsetmiterlimit{4.0}
\pgfpathmoveto{\pgfqpoint{0.682cm}{0.671cm}}
\pgfpathlineto{\pgfqpoint{0.682cm}{0.042cm}}
\pgfusepath{stroke}
\end{pgfscope}
\end{pgfscope}
\end{pgfscope}
\end{pgfscope}
\end{tikzpicture}}}}
\newcommand{^{\!\resizebox{0.6em}{!}{
\begin{tikzpicture}
\pgfpathmoveto{\pgfqpoint{0cm}{0cm}}
\pgfpathlineto{\pgfqpoint{1.376cm}{0cm}}
\pgfpathlineto{\pgfqpoint{1.376cm}{1.588cm}}
\pgfpathlineto{\pgfqpoint{0cm}{1.588cm}}
\pgfpathclose
\pgfusepath{clip}
\begin{pgfscope}
\begin{pgfscope}
\pgfpathmoveto{\pgfqpoint{0cm}{0cm}}
\pgfpathlineto{\pgfqpoint{1.376cm}{0cm}}
\pgfpathlineto{\pgfqpoint{1.376cm}{1.588cm}}
\pgfpathlineto{\pgfqpoint{0cm}{1.588cm}}
\pgfpathclose
\pgfusepath{clip}
\begin{pgfscope}
\begin{pgfscope}
\definecolor{eps2pgf_color}{gray}{0.976471}\pgfsetstrokecolor{eps2pgf_color}\pgfsetfillcolor{eps2pgf_color}
\pgfpathmoveto{\pgfqpoint{0cm}{0cm}}
\pgfpathlineto{\pgfqpoint{1.376cm}{0cm}}
\pgfpathlineto{\pgfqpoint{1.376cm}{1.588cm}}
\pgfpathlineto{\pgfqpoint{0cm}{1.588cm}}
\pgfpathclose
\pgfusepath{fill}
\end{pgfscope}
\begin{pgfscope}
\pgfsetdash{}{0cm}
\pgfsetlinewidth{0.818mm}
\pgfsetroundcap
\pgfsetroundjoin
\pgfsetmiterlimit{7.0}
\definecolor{eps2pgf_color}{gray}{0}\pgfsetstrokecolor{eps2pgf_color}\pgfsetfillcolor{eps2pgf_color}
\pgfpathmoveto{\pgfqpoint{0.117cm}{1.476cm}}
\pgfpathlineto{\pgfqpoint{0.682cm}{0.726cm}}
\pgfpathlineto{\pgfqpoint{1.246cm}{1.476cm}}
\pgfusepath{stroke}
\end{pgfscope}
\definecolor{eps2pgf_color}{gray}{0}\pgfsetstrokecolor{eps2pgf_color}\pgfsetfillcolor{eps2pgf_color}
\pgfpathmoveto{\pgfqpoint{0.273cm}{1.451cm}}
\pgfpathcurveto{\pgfqpoint{0.273cm}{1.487cm}}{\pgfqpoint{0.259cm}{1.522cm}}{\pgfqpoint{0.233cm}{1.547cm}}
\pgfpathcurveto{\pgfqpoint{0.207cm}{1.573cm}}{\pgfqpoint{0.173cm}{1.588cm}}{\pgfqpoint{0.137cm}{1.588cm}}
\pgfpathcurveto{\pgfqpoint{0.1cm}{1.588cm}}{\pgfqpoint{0.066cm}{1.573cm}}{\pgfqpoint{0.04cm}{1.547cm}}
\pgfpathcurveto{\pgfqpoint{0.014cm}{1.522cm}}{\pgfqpoint{0cm}{1.487cm}}{\pgfqpoint{0cm}{1.451cm}}
\pgfpathcurveto{\pgfqpoint{0cm}{1.414cm}}{\pgfqpoint{0.014cm}{1.379cm}}{\pgfqpoint{0.04cm}{1.354cm}}
\pgfpathcurveto{\pgfqpoint{0.066cm}{1.328cm}}{\pgfqpoint{0.1cm}{1.314cm}}{\pgfqpoint{0.137cm}{1.314cm}}
\pgfpathcurveto{\pgfqpoint{0.173cm}{1.314cm}}{\pgfqpoint{0.207cm}{1.328cm}}{\pgfqpoint{0.233cm}{1.354cm}}
\pgfpathcurveto{\pgfqpoint{0.259cm}{1.379cm}}{\pgfqpoint{0.273cm}{1.414cm}}{\pgfqpoint{0.273cm}{1.451cm}}
\pgfusepath{fill}
\pgfpathmoveto{\pgfqpoint{1.345cm}{1.426cm}}
\pgfpathcurveto{\pgfqpoint{1.345cm}{1.463cm}}{\pgfqpoint{1.331cm}{1.497cm}}{\pgfqpoint{1.305cm}{1.523cm}}
\pgfpathcurveto{\pgfqpoint{1.28cm}{1.549cm}}{\pgfqpoint{1.245cm}{1.563cm}}{\pgfqpoint{1.209cm}{1.563cm}}
\pgfpathcurveto{\pgfqpoint{1.172cm}{1.563cm}}{\pgfqpoint{1.138cm}{1.549cm}}{\pgfqpoint{1.112cm}{1.523cm}}
\pgfpathcurveto{\pgfqpoint{1.087cm}{1.497cm}}{\pgfqpoint{1.072cm}{1.463cm}}{\pgfqpoint{1.072cm}{1.426cm}}
\pgfpathcurveto{\pgfqpoint{1.072cm}{1.39cm}}{\pgfqpoint{1.087cm}{1.355cm}}{\pgfqpoint{1.112cm}{1.329cm}}
\pgfpathcurveto{\pgfqpoint{1.138cm}{1.304cm}}{\pgfqpoint{1.172cm}{1.289cm}}{\pgfqpoint{1.209cm}{1.289cm}}
\pgfpathcurveto{\pgfqpoint{1.245cm}{1.289cm}}{\pgfqpoint{1.28cm}{1.304cm}}{\pgfqpoint{1.305cm}{1.329cm}}
\pgfpathcurveto{\pgfqpoint{1.331cm}{1.355cm}}{\pgfqpoint{1.345cm}{1.39cm}}{\pgfqpoint{1.345cm}{1.426cm}}
\pgfusepath{fill}
\begin{pgfscope}
\pgfsetdash{}{0cm}
\pgfsetlinewidth{0.818mm}
\pgfsetroundcap
\pgfsetmiterlimit{4.0}
\pgfpathmoveto{\pgfqpoint{0.682cm}{0.726cm}}
\pgfpathlineto{\pgfqpoint{0.682cm}{0.097cm}}
\pgfusepath{stroke}
\end{pgfscope}
\end{pgfscope}
\end{pgfscope}
\end{pgfscope}
\end{tikzpicture}}}}[1]{#1^{\!\resizebox{0.6em}{!}{
\begin{tikzpicture}
\pgfpathmoveto{\pgfqpoint{0cm}{0cm}}
\pgfpathlineto{\pgfqpoint{1.376cm}{0cm}}
\pgfpathlineto{\pgfqpoint{1.376cm}{1.588cm}}
\pgfpathlineto{\pgfqpoint{0cm}{1.588cm}}
\pgfpathclose
\pgfusepath{clip}
\begin{pgfscope}
\begin{pgfscope}
\pgfpathmoveto{\pgfqpoint{0cm}{0cm}}
\pgfpathlineto{\pgfqpoint{1.376cm}{0cm}}
\pgfpathlineto{\pgfqpoint{1.376cm}{1.588cm}}
\pgfpathlineto{\pgfqpoint{0cm}{1.588cm}}
\pgfpathclose
\pgfusepath{clip}
\begin{pgfscope}
\begin{pgfscope}
\definecolor{eps2pgf_color}{gray}{0.976471}\pgfsetstrokecolor{eps2pgf_color}\pgfsetfillcolor{eps2pgf_color}
\pgfpathmoveto{\pgfqpoint{0cm}{0cm}}
\pgfpathlineto{\pgfqpoint{1.376cm}{0cm}}
\pgfpathlineto{\pgfqpoint{1.376cm}{1.588cm}}
\pgfpathlineto{\pgfqpoint{0cm}{1.588cm}}
\pgfpathclose
\pgfusepath{fill}
\end{pgfscope}
\begin{pgfscope}
\pgfsetdash{}{0cm}
\pgfsetlinewidth{0.818mm}
\pgfsetroundcap
\pgfsetroundjoin
\pgfsetmiterlimit{7.0}
\definecolor{eps2pgf_color}{gray}{0}\pgfsetstrokecolor{eps2pgf_color}\pgfsetfillcolor{eps2pgf_color}
\pgfpathmoveto{\pgfqpoint{0.117cm}{1.476cm}}
\pgfpathlineto{\pgfqpoint{0.682cm}{0.726cm}}
\pgfpathlineto{\pgfqpoint{1.246cm}{1.476cm}}
\pgfusepath{stroke}
\end{pgfscope}
\definecolor{eps2pgf_color}{gray}{0}\pgfsetstrokecolor{eps2pgf_color}\pgfsetfillcolor{eps2pgf_color}
\pgfpathmoveto{\pgfqpoint{0.273cm}{1.451cm}}
\pgfpathcurveto{\pgfqpoint{0.273cm}{1.487cm}}{\pgfqpoint{0.259cm}{1.522cm}}{\pgfqpoint{0.233cm}{1.547cm}}
\pgfpathcurveto{\pgfqpoint{0.207cm}{1.573cm}}{\pgfqpoint{0.173cm}{1.588cm}}{\pgfqpoint{0.137cm}{1.588cm}}
\pgfpathcurveto{\pgfqpoint{0.1cm}{1.588cm}}{\pgfqpoint{0.066cm}{1.573cm}}{\pgfqpoint{0.04cm}{1.547cm}}
\pgfpathcurveto{\pgfqpoint{0.014cm}{1.522cm}}{\pgfqpoint{0cm}{1.487cm}}{\pgfqpoint{0cm}{1.451cm}}
\pgfpathcurveto{\pgfqpoint{0cm}{1.414cm}}{\pgfqpoint{0.014cm}{1.379cm}}{\pgfqpoint{0.04cm}{1.354cm}}
\pgfpathcurveto{\pgfqpoint{0.066cm}{1.328cm}}{\pgfqpoint{0.1cm}{1.314cm}}{\pgfqpoint{0.137cm}{1.314cm}}
\pgfpathcurveto{\pgfqpoint{0.173cm}{1.314cm}}{\pgfqpoint{0.207cm}{1.328cm}}{\pgfqpoint{0.233cm}{1.354cm}}
\pgfpathcurveto{\pgfqpoint{0.259cm}{1.379cm}}{\pgfqpoint{0.273cm}{1.414cm}}{\pgfqpoint{0.273cm}{1.451cm}}
\pgfusepath{fill}
\pgfpathmoveto{\pgfqpoint{1.345cm}{1.426cm}}
\pgfpathcurveto{\pgfqpoint{1.345cm}{1.463cm}}{\pgfqpoint{1.331cm}{1.497cm}}{\pgfqpoint{1.305cm}{1.523cm}}
\pgfpathcurveto{\pgfqpoint{1.28cm}{1.549cm}}{\pgfqpoint{1.245cm}{1.563cm}}{\pgfqpoint{1.209cm}{1.563cm}}
\pgfpathcurveto{\pgfqpoint{1.172cm}{1.563cm}}{\pgfqpoint{1.138cm}{1.549cm}}{\pgfqpoint{1.112cm}{1.523cm}}
\pgfpathcurveto{\pgfqpoint{1.087cm}{1.497cm}}{\pgfqpoint{1.072cm}{1.463cm}}{\pgfqpoint{1.072cm}{1.426cm}}
\pgfpathcurveto{\pgfqpoint{1.072cm}{1.39cm}}{\pgfqpoint{1.087cm}{1.355cm}}{\pgfqpoint{1.112cm}{1.329cm}}
\pgfpathcurveto{\pgfqpoint{1.138cm}{1.304cm}}{\pgfqpoint{1.172cm}{1.289cm}}{\pgfqpoint{1.209cm}{1.289cm}}
\pgfpathcurveto{\pgfqpoint{1.245cm}{1.289cm}}{\pgfqpoint{1.28cm}{1.304cm}}{\pgfqpoint{1.305cm}{1.329cm}}
\pgfpathcurveto{\pgfqpoint{1.331cm}{1.355cm}}{\pgfqpoint{1.345cm}{1.39cm}}{\pgfqpoint{1.345cm}{1.426cm}}
\pgfusepath{fill}
\begin{pgfscope}
\pgfsetdash{}{0cm}
\pgfsetlinewidth{0.818mm}
\pgfsetroundcap
\pgfsetmiterlimit{4.0}
\pgfpathmoveto{\pgfqpoint{0.682cm}{0.726cm}}
\pgfpathlineto{\pgfqpoint{0.682cm}{0.097cm}}
\pgfusepath{stroke}
\end{pgfscope}
\end{pgfscope}
\end{pgfscope}
\end{pgfscope}
\end{tikzpicture}}}}
\newcommand{^{\!\resizebox{!}{.8em}{
\begin{tikzpicture}
\pgfpathmoveto{\pgfqpoint{0cm}{-0.035cm}}
\pgfpathlineto{\pgfqpoint{1.976cm}{-0.035cm}}
\pgfpathlineto{\pgfqpoint{1.976cm}{1.94cm}}
\pgfpathlineto{\pgfqpoint{0cm}{1.94cm}}
\pgfpathclose
\pgfusepath{clip}
\begin{pgfscope}
\begin{pgfscope}
\pgfpathmoveto{\pgfqpoint{0cm}{-0.035cm}}
\pgfpathlineto{\pgfqpoint{1.976cm}{-0.035cm}}
\pgfpathlineto{\pgfqpoint{1.976cm}{1.94cm}}
\pgfpathlineto{\pgfqpoint{0cm}{1.94cm}}
\pgfpathclose
\pgfusepath{clip}
\begin{pgfscope}
\begin{pgfscope}
\pgfsetdash{}{0cm}
\pgfsetlinewidth{0.818mm}
\pgfsetroundcap
\pgfsetroundjoin
\pgfsetmiterlimit{7.0}
\definecolor{eps2pgf_color}{gray}{0}\pgfsetstrokecolor{eps2pgf_color}\pgfsetfillcolor{eps2pgf_color}
\pgfpathmoveto{\pgfqpoint{0.117cm}{1.815cm}}
\pgfpathlineto{\pgfqpoint{0.682cm}{1.065cm}}
\pgfpathlineto{\pgfqpoint{1.246cm}{1.815cm}}
\pgfusepath{stroke}
\end{pgfscope}
\definecolor{eps2pgf_color}{gray}{0}\pgfsetstrokecolor{eps2pgf_color}\pgfsetfillcolor{eps2pgf_color}
\pgfpathmoveto{\pgfqpoint{0.273cm}{1.789cm}}
\pgfpathcurveto{\pgfqpoint{0.273cm}{1.825cm}}{\pgfqpoint{0.259cm}{1.86cm}}{\pgfqpoint{0.233cm}{1.886cm}}
\pgfpathcurveto{\pgfqpoint{0.207cm}{1.912cm}}{\pgfqpoint{0.173cm}{1.926cm}}{\pgfqpoint{0.137cm}{1.926cm}}
\pgfpathcurveto{\pgfqpoint{0.1cm}{1.926cm}}{\pgfqpoint{0.066cm}{1.912cm}}{\pgfqpoint{0.04cm}{1.886cm}}
\pgfpathcurveto{\pgfqpoint{0.014cm}{1.86cm}}{\pgfqpoint{0cm}{1.825cm}}{\pgfqpoint{0cm}{1.789cm}}
\pgfpathcurveto{\pgfqpoint{0cm}{1.753cm}}{\pgfqpoint{0.014cm}{1.718cm}}{\pgfqpoint{0.04cm}{1.692cm}}
\pgfpathcurveto{\pgfqpoint{0.066cm}{1.667cm}}{\pgfqpoint{0.1cm}{1.652cm}}{\pgfqpoint{0.137cm}{1.652cm}}
\pgfpathcurveto{\pgfqpoint{0.173cm}{1.652cm}}{\pgfqpoint{0.207cm}{1.667cm}}{\pgfqpoint{0.233cm}{1.692cm}}
\pgfpathcurveto{\pgfqpoint{0.259cm}{1.718cm}}{\pgfqpoint{0.273cm}{1.753cm}}{\pgfqpoint{0.273cm}{1.789cm}}
\pgfusepath{fill}
\begin{pgfscope}
\pgfsetdash{}{0cm}
\pgfsetlinewidth{0.818mm}
\pgfsetmiterlimit{7.0}
\pgfpathmoveto{\pgfqpoint{0.682cm}{1.065cm}}
\pgfpathlineto{\pgfqpoint{0.679cm}{1.812cm}}
\pgfusepath{stroke}
\end{pgfscope}
\pgfpathmoveto{\pgfqpoint{0.815cm}{1.793cm}}
\pgfpathcurveto{\pgfqpoint{0.815cm}{1.829cm}}{\pgfqpoint{0.801cm}{1.864cm}}{\pgfqpoint{0.775cm}{1.89cm}}
\pgfpathcurveto{\pgfqpoint{0.75cm}{1.915cm}}{\pgfqpoint{0.715cm}{1.93cm}}{\pgfqpoint{0.679cm}{1.93cm}}
\pgfpathcurveto{\pgfqpoint{0.643cm}{1.93cm}}{\pgfqpoint{0.608cm}{1.915cm}}{\pgfqpoint{0.582cm}{1.89cm}}
\pgfpathcurveto{\pgfqpoint{0.557cm}{1.864cm}}{\pgfqpoint{0.542cm}{1.829cm}}{\pgfqpoint{0.542cm}{1.793cm}}
\pgfpathcurveto{\pgfqpoint{0.542cm}{1.756cm}}{\pgfqpoint{0.557cm}{1.722cm}}{\pgfqpoint{0.582cm}{1.696cm}}
\pgfpathcurveto{\pgfqpoint{0.608cm}{1.67cm}}{\pgfqpoint{0.643cm}{1.656cm}}{\pgfqpoint{0.679cm}{1.656cm}}
\pgfpathcurveto{\pgfqpoint{0.715cm}{1.656cm}}{\pgfqpoint{0.75cm}{1.67cm}}{\pgfqpoint{0.775cm}{1.696cm}}
\pgfpathcurveto{\pgfqpoint{0.801cm}{1.722cm}}{\pgfqpoint{0.815cm}{1.756cm}}{\pgfqpoint{0.815cm}{1.793cm}}
\pgfusepath{fill}
\pgfpathmoveto{\pgfqpoint{1.345cm}{1.765cm}}
\pgfpathcurveto{\pgfqpoint{1.345cm}{1.801cm}}{\pgfqpoint{1.331cm}{1.836cm}}{\pgfqpoint{1.305cm}{1.862cm}}
\pgfpathcurveto{\pgfqpoint{1.28cm}{1.887cm}}{\pgfqpoint{1.245cm}{1.902cm}}{\pgfqpoint{1.209cm}{1.902cm}}
\pgfpathcurveto{\pgfqpoint{1.172cm}{1.902cm}}{\pgfqpoint{1.138cm}{1.887cm}}{\pgfqpoint{1.112cm}{1.862cm}}
\pgfpathcurveto{\pgfqpoint{1.087cm}{1.836cm}}{\pgfqpoint{1.072cm}{1.801cm}}{\pgfqpoint{1.072cm}{1.765cm}}
\pgfpathcurveto{\pgfqpoint{1.072cm}{1.728cm}}{\pgfqpoint{1.087cm}{1.694cm}}{\pgfqpoint{1.112cm}{1.668cm}}
\pgfpathcurveto{\pgfqpoint{1.138cm}{1.642cm}}{\pgfqpoint{1.172cm}{1.628cm}}{\pgfqpoint{1.209cm}{1.628cm}}
\pgfpathcurveto{\pgfqpoint{1.245cm}{1.628cm}}{\pgfqpoint{1.28cm}{1.642cm}}{\pgfqpoint{1.305cm}{1.668cm}}
\pgfpathcurveto{\pgfqpoint{1.331cm}{1.694cm}}{\pgfqpoint{1.345cm}{1.728cm}}{\pgfqpoint{1.345cm}{1.765cm}}
\pgfusepath{fill}
\begin{pgfscope}
\pgfsetdash{}{0cm}
\pgfsetlinewidth{0.818mm}
\pgfsetroundcap
\pgfsetroundjoin
\pgfsetmiterlimit{7.0}
\pgfpathmoveto{\pgfqpoint{0.682cm}{1.065cm}}
\pgfpathlineto{\pgfqpoint{1.246cm}{0.315cm}}
\pgfpathlineto{\pgfqpoint{1.811cm}{1.065cm}}
\pgfusepath{stroke}
\end{pgfscope}
\pgfpathmoveto{\pgfqpoint{1.948cm}{1.065cm}}
\pgfpathcurveto{\pgfqpoint{1.948cm}{1.101cm}}{\pgfqpoint{1.933cm}{1.136cm}}{\pgfqpoint{1.907cm}{1.162cm}}
\pgfpathcurveto{\pgfqpoint{1.882cm}{1.187cm}}{\pgfqpoint{1.847cm}{1.202cm}}{\pgfqpoint{1.811cm}{1.202cm}}
\pgfpathcurveto{\pgfqpoint{1.775cm}{1.202cm}}{\pgfqpoint{1.74cm}{1.187cm}}{\pgfqpoint{1.714cm}{1.162cm}}
\pgfpathcurveto{\pgfqpoint{1.689cm}{1.136cm}}{\pgfqpoint{1.674cm}{1.101cm}}{\pgfqpoint{1.674cm}{1.065cm}}
\pgfpathcurveto{\pgfqpoint{1.674cm}{1.029cm}}{\pgfqpoint{1.689cm}{0.994cm}}{\pgfqpoint{1.714cm}{0.968cm}}
\pgfpathcurveto{\pgfqpoint{1.74cm}{0.942cm}}{\pgfqpoint{1.775cm}{0.928cm}}{\pgfqpoint{1.811cm}{0.928cm}}
\pgfpathcurveto{\pgfqpoint{1.847cm}{0.928cm}}{\pgfqpoint{1.882cm}{0.942cm}}{\pgfqpoint{1.907cm}{0.968cm}}
\pgfpathcurveto{\pgfqpoint{1.933cm}{0.994cm}}{\pgfqpoint{1.948cm}{1.029cm}}{\pgfqpoint{1.948cm}{1.065cm}}
\pgfusepath{fill}
\begin{pgfscope}
\pgfsetdash{}{0cm}
\pgfsetlinewidth{0.818mm}
\pgfsetmiterlimit{7.0}
\pgfpathmoveto{\pgfqpoint{1.246cm}{0.315cm}}
\pgfpathlineto{\pgfqpoint{1.244cm}{1.061cm}}
\pgfusepath{stroke}
\end{pgfscope}
\pgfpathmoveto{\pgfqpoint{1.38cm}{1.065cm}}
\pgfpathcurveto{\pgfqpoint{1.38cm}{1.101cm}}{\pgfqpoint{1.366cm}{1.136cm}}{\pgfqpoint{1.34cm}{1.162cm}}
\pgfpathcurveto{\pgfqpoint{1.315cm}{1.187cm}}{\pgfqpoint{1.28cm}{1.202cm}}{\pgfqpoint{1.244cm}{1.202cm}}
\pgfpathcurveto{\pgfqpoint{1.207cm}{1.202cm}}{\pgfqpoint{1.173cm}{1.187cm}}{\pgfqpoint{1.147cm}{1.162cm}}
\pgfpathcurveto{\pgfqpoint{1.121cm}{1.136cm}}{\pgfqpoint{1.107cm}{1.101cm}}{\pgfqpoint{1.107cm}{1.065cm}}
\pgfpathcurveto{\pgfqpoint{1.107cm}{1.029cm}}{\pgfqpoint{1.121cm}{0.994cm}}{\pgfqpoint{1.147cm}{0.968cm}}
\pgfpathcurveto{\pgfqpoint{1.173cm}{0.942cm}}{\pgfqpoint{1.207cm}{0.928cm}}{\pgfqpoint{1.244cm}{0.928cm}}
\pgfpathcurveto{\pgfqpoint{1.28cm}{0.928cm}}{\pgfqpoint{1.315cm}{0.942cm}}{\pgfqpoint{1.34cm}{0.968cm}}
\pgfpathcurveto{\pgfqpoint{1.366cm}{0.994cm}}{\pgfqpoint{1.38cm}{1.029cm}}{\pgfqpoint{1.38cm}{1.065cm}}
\pgfusepath{fill}
\begin{pgfscope}
\pgfsetdash{}{0cm}
\pgfsetlinewidth{0.818mm}
\pgfsetmiterlimit{4.0}
\pgfpathmoveto{\pgfqpoint{1.383cm}{0.178cm}}
\pgfpathcurveto{\pgfqpoint{1.383cm}{0.214cm}}{\pgfqpoint{1.369cm}{0.249cm}}{\pgfqpoint{1.343cm}{0.275cm}}
\pgfpathcurveto{\pgfqpoint{1.317cm}{0.3cm}}{\pgfqpoint{1.283cm}{0.315cm}}{\pgfqpoint{1.246cm}{0.315cm}}
\pgfpathcurveto{\pgfqpoint{1.21cm}{0.315cm}}{\pgfqpoint{1.175cm}{0.3cm}}{\pgfqpoint{1.15cm}{0.275cm}}
\pgfpathcurveto{\pgfqpoint{1.124cm}{0.249cm}}{\pgfqpoint{1.11cm}{0.214cm}}{\pgfqpoint{1.11cm}{0.178cm}}
\pgfpathcurveto{\pgfqpoint{1.11cm}{0.141cm}}{\pgfqpoint{1.124cm}{0.107cm}}{\pgfqpoint{1.15cm}{0.081cm}}
\pgfpathcurveto{\pgfqpoint{1.175cm}{0.055cm}}{\pgfqpoint{1.21cm}{0.041cm}}{\pgfqpoint{1.246cm}{0.041cm}}
\pgfpathcurveto{\pgfqpoint{1.283cm}{0.041cm}}{\pgfqpoint{1.317cm}{0.055cm}}{\pgfqpoint{1.343cm}{0.081cm}}
\pgfpathcurveto{\pgfqpoint{1.369cm}{0.107cm}}{\pgfqpoint{1.383cm}{0.141cm}}{\pgfqpoint{1.383cm}{0.178cm}}
\pgfusepath{stroke}
\end{pgfscope}
\end{pgfscope}
\end{pgfscope}
\end{pgfscope}
\end{tikzpicture}}}}[1]{#1^{\!\resizebox{!}{.8em}{
\begin{tikzpicture}
\pgfpathmoveto{\pgfqpoint{0cm}{-0.035cm}}
\pgfpathlineto{\pgfqpoint{1.976cm}{-0.035cm}}
\pgfpathlineto{\pgfqpoint{1.976cm}{1.94cm}}
\pgfpathlineto{\pgfqpoint{0cm}{1.94cm}}
\pgfpathclose
\pgfusepath{clip}
\begin{pgfscope}
\begin{pgfscope}
\pgfpathmoveto{\pgfqpoint{0cm}{-0.035cm}}
\pgfpathlineto{\pgfqpoint{1.976cm}{-0.035cm}}
\pgfpathlineto{\pgfqpoint{1.976cm}{1.94cm}}
\pgfpathlineto{\pgfqpoint{0cm}{1.94cm}}
\pgfpathclose
\pgfusepath{clip}
\begin{pgfscope}
\begin{pgfscope}
\pgfsetdash{}{0cm}
\pgfsetlinewidth{0.818mm}
\pgfsetroundcap
\pgfsetroundjoin
\pgfsetmiterlimit{7.0}
\definecolor{eps2pgf_color}{gray}{0}\pgfsetstrokecolor{eps2pgf_color}\pgfsetfillcolor{eps2pgf_color}
\pgfpathmoveto{\pgfqpoint{0.117cm}{1.815cm}}
\pgfpathlineto{\pgfqpoint{0.682cm}{1.065cm}}
\pgfpathlineto{\pgfqpoint{1.246cm}{1.815cm}}
\pgfusepath{stroke}
\end{pgfscope}
\definecolor{eps2pgf_color}{gray}{0}\pgfsetstrokecolor{eps2pgf_color}\pgfsetfillcolor{eps2pgf_color}
\pgfpathmoveto{\pgfqpoint{0.273cm}{1.789cm}}
\pgfpathcurveto{\pgfqpoint{0.273cm}{1.825cm}}{\pgfqpoint{0.259cm}{1.86cm}}{\pgfqpoint{0.233cm}{1.886cm}}
\pgfpathcurveto{\pgfqpoint{0.207cm}{1.912cm}}{\pgfqpoint{0.173cm}{1.926cm}}{\pgfqpoint{0.137cm}{1.926cm}}
\pgfpathcurveto{\pgfqpoint{0.1cm}{1.926cm}}{\pgfqpoint{0.066cm}{1.912cm}}{\pgfqpoint{0.04cm}{1.886cm}}
\pgfpathcurveto{\pgfqpoint{0.014cm}{1.86cm}}{\pgfqpoint{0cm}{1.825cm}}{\pgfqpoint{0cm}{1.789cm}}
\pgfpathcurveto{\pgfqpoint{0cm}{1.753cm}}{\pgfqpoint{0.014cm}{1.718cm}}{\pgfqpoint{0.04cm}{1.692cm}}
\pgfpathcurveto{\pgfqpoint{0.066cm}{1.667cm}}{\pgfqpoint{0.1cm}{1.652cm}}{\pgfqpoint{0.137cm}{1.652cm}}
\pgfpathcurveto{\pgfqpoint{0.173cm}{1.652cm}}{\pgfqpoint{0.207cm}{1.667cm}}{\pgfqpoint{0.233cm}{1.692cm}}
\pgfpathcurveto{\pgfqpoint{0.259cm}{1.718cm}}{\pgfqpoint{0.273cm}{1.753cm}}{\pgfqpoint{0.273cm}{1.789cm}}
\pgfusepath{fill}
\begin{pgfscope}
\pgfsetdash{}{0cm}
\pgfsetlinewidth{0.818mm}
\pgfsetmiterlimit{7.0}
\pgfpathmoveto{\pgfqpoint{0.682cm}{1.065cm}}
\pgfpathlineto{\pgfqpoint{0.679cm}{1.812cm}}
\pgfusepath{stroke}
\end{pgfscope}
\pgfpathmoveto{\pgfqpoint{0.815cm}{1.793cm}}
\pgfpathcurveto{\pgfqpoint{0.815cm}{1.829cm}}{\pgfqpoint{0.801cm}{1.864cm}}{\pgfqpoint{0.775cm}{1.89cm}}
\pgfpathcurveto{\pgfqpoint{0.75cm}{1.915cm}}{\pgfqpoint{0.715cm}{1.93cm}}{\pgfqpoint{0.679cm}{1.93cm}}
\pgfpathcurveto{\pgfqpoint{0.643cm}{1.93cm}}{\pgfqpoint{0.608cm}{1.915cm}}{\pgfqpoint{0.582cm}{1.89cm}}
\pgfpathcurveto{\pgfqpoint{0.557cm}{1.864cm}}{\pgfqpoint{0.542cm}{1.829cm}}{\pgfqpoint{0.542cm}{1.793cm}}
\pgfpathcurveto{\pgfqpoint{0.542cm}{1.756cm}}{\pgfqpoint{0.557cm}{1.722cm}}{\pgfqpoint{0.582cm}{1.696cm}}
\pgfpathcurveto{\pgfqpoint{0.608cm}{1.67cm}}{\pgfqpoint{0.643cm}{1.656cm}}{\pgfqpoint{0.679cm}{1.656cm}}
\pgfpathcurveto{\pgfqpoint{0.715cm}{1.656cm}}{\pgfqpoint{0.75cm}{1.67cm}}{\pgfqpoint{0.775cm}{1.696cm}}
\pgfpathcurveto{\pgfqpoint{0.801cm}{1.722cm}}{\pgfqpoint{0.815cm}{1.756cm}}{\pgfqpoint{0.815cm}{1.793cm}}
\pgfusepath{fill}
\pgfpathmoveto{\pgfqpoint{1.345cm}{1.765cm}}
\pgfpathcurveto{\pgfqpoint{1.345cm}{1.801cm}}{\pgfqpoint{1.331cm}{1.836cm}}{\pgfqpoint{1.305cm}{1.862cm}}
\pgfpathcurveto{\pgfqpoint{1.28cm}{1.887cm}}{\pgfqpoint{1.245cm}{1.902cm}}{\pgfqpoint{1.209cm}{1.902cm}}
\pgfpathcurveto{\pgfqpoint{1.172cm}{1.902cm}}{\pgfqpoint{1.138cm}{1.887cm}}{\pgfqpoint{1.112cm}{1.862cm}}
\pgfpathcurveto{\pgfqpoint{1.087cm}{1.836cm}}{\pgfqpoint{1.072cm}{1.801cm}}{\pgfqpoint{1.072cm}{1.765cm}}
\pgfpathcurveto{\pgfqpoint{1.072cm}{1.728cm}}{\pgfqpoint{1.087cm}{1.694cm}}{\pgfqpoint{1.112cm}{1.668cm}}
\pgfpathcurveto{\pgfqpoint{1.138cm}{1.642cm}}{\pgfqpoint{1.172cm}{1.628cm}}{\pgfqpoint{1.209cm}{1.628cm}}
\pgfpathcurveto{\pgfqpoint{1.245cm}{1.628cm}}{\pgfqpoint{1.28cm}{1.642cm}}{\pgfqpoint{1.305cm}{1.668cm}}
\pgfpathcurveto{\pgfqpoint{1.331cm}{1.694cm}}{\pgfqpoint{1.345cm}{1.728cm}}{\pgfqpoint{1.345cm}{1.765cm}}
\pgfusepath{fill}
\begin{pgfscope}
\pgfsetdash{}{0cm}
\pgfsetlinewidth{0.818mm}
\pgfsetroundcap
\pgfsetroundjoin
\pgfsetmiterlimit{7.0}
\pgfpathmoveto{\pgfqpoint{0.682cm}{1.065cm}}
\pgfpathlineto{\pgfqpoint{1.246cm}{0.315cm}}
\pgfpathlineto{\pgfqpoint{1.811cm}{1.065cm}}
\pgfusepath{stroke}
\end{pgfscope}
\pgfpathmoveto{\pgfqpoint{1.948cm}{1.065cm}}
\pgfpathcurveto{\pgfqpoint{1.948cm}{1.101cm}}{\pgfqpoint{1.933cm}{1.136cm}}{\pgfqpoint{1.907cm}{1.162cm}}
\pgfpathcurveto{\pgfqpoint{1.882cm}{1.187cm}}{\pgfqpoint{1.847cm}{1.202cm}}{\pgfqpoint{1.811cm}{1.202cm}}
\pgfpathcurveto{\pgfqpoint{1.775cm}{1.202cm}}{\pgfqpoint{1.74cm}{1.187cm}}{\pgfqpoint{1.714cm}{1.162cm}}
\pgfpathcurveto{\pgfqpoint{1.689cm}{1.136cm}}{\pgfqpoint{1.674cm}{1.101cm}}{\pgfqpoint{1.674cm}{1.065cm}}
\pgfpathcurveto{\pgfqpoint{1.674cm}{1.029cm}}{\pgfqpoint{1.689cm}{0.994cm}}{\pgfqpoint{1.714cm}{0.968cm}}
\pgfpathcurveto{\pgfqpoint{1.74cm}{0.942cm}}{\pgfqpoint{1.775cm}{0.928cm}}{\pgfqpoint{1.811cm}{0.928cm}}
\pgfpathcurveto{\pgfqpoint{1.847cm}{0.928cm}}{\pgfqpoint{1.882cm}{0.942cm}}{\pgfqpoint{1.907cm}{0.968cm}}
\pgfpathcurveto{\pgfqpoint{1.933cm}{0.994cm}}{\pgfqpoint{1.948cm}{1.029cm}}{\pgfqpoint{1.948cm}{1.065cm}}
\pgfusepath{fill}
\begin{pgfscope}
\pgfsetdash{}{0cm}
\pgfsetlinewidth{0.818mm}
\pgfsetmiterlimit{7.0}
\pgfpathmoveto{\pgfqpoint{1.246cm}{0.315cm}}
\pgfpathlineto{\pgfqpoint{1.244cm}{1.061cm}}
\pgfusepath{stroke}
\end{pgfscope}
\pgfpathmoveto{\pgfqpoint{1.38cm}{1.065cm}}
\pgfpathcurveto{\pgfqpoint{1.38cm}{1.101cm}}{\pgfqpoint{1.366cm}{1.136cm}}{\pgfqpoint{1.34cm}{1.162cm}}
\pgfpathcurveto{\pgfqpoint{1.315cm}{1.187cm}}{\pgfqpoint{1.28cm}{1.202cm}}{\pgfqpoint{1.244cm}{1.202cm}}
\pgfpathcurveto{\pgfqpoint{1.207cm}{1.202cm}}{\pgfqpoint{1.173cm}{1.187cm}}{\pgfqpoint{1.147cm}{1.162cm}}
\pgfpathcurveto{\pgfqpoint{1.121cm}{1.136cm}}{\pgfqpoint{1.107cm}{1.101cm}}{\pgfqpoint{1.107cm}{1.065cm}}
\pgfpathcurveto{\pgfqpoint{1.107cm}{1.029cm}}{\pgfqpoint{1.121cm}{0.994cm}}{\pgfqpoint{1.147cm}{0.968cm}}
\pgfpathcurveto{\pgfqpoint{1.173cm}{0.942cm}}{\pgfqpoint{1.207cm}{0.928cm}}{\pgfqpoint{1.244cm}{0.928cm}}
\pgfpathcurveto{\pgfqpoint{1.28cm}{0.928cm}}{\pgfqpoint{1.315cm}{0.942cm}}{\pgfqpoint{1.34cm}{0.968cm}}
\pgfpathcurveto{\pgfqpoint{1.366cm}{0.994cm}}{\pgfqpoint{1.38cm}{1.029cm}}{\pgfqpoint{1.38cm}{1.065cm}}
\pgfusepath{fill}
\begin{pgfscope}
\pgfsetdash{}{0cm}
\pgfsetlinewidth{0.818mm}
\pgfsetmiterlimit{4.0}
\pgfpathmoveto{\pgfqpoint{1.383cm}{0.178cm}}
\pgfpathcurveto{\pgfqpoint{1.383cm}{0.214cm}}{\pgfqpoint{1.369cm}{0.249cm}}{\pgfqpoint{1.343cm}{0.275cm}}
\pgfpathcurveto{\pgfqpoint{1.317cm}{0.3cm}}{\pgfqpoint{1.283cm}{0.315cm}}{\pgfqpoint{1.246cm}{0.315cm}}
\pgfpathcurveto{\pgfqpoint{1.21cm}{0.315cm}}{\pgfqpoint{1.175cm}{0.3cm}}{\pgfqpoint{1.15cm}{0.275cm}}
\pgfpathcurveto{\pgfqpoint{1.124cm}{0.249cm}}{\pgfqpoint{1.11cm}{0.214cm}}{\pgfqpoint{1.11cm}{0.178cm}}
\pgfpathcurveto{\pgfqpoint{1.11cm}{0.141cm}}{\pgfqpoint{1.124cm}{0.107cm}}{\pgfqpoint{1.15cm}{0.081cm}}
\pgfpathcurveto{\pgfqpoint{1.175cm}{0.055cm}}{\pgfqpoint{1.21cm}{0.041cm}}{\pgfqpoint{1.246cm}{0.041cm}}
\pgfpathcurveto{\pgfqpoint{1.283cm}{0.041cm}}{\pgfqpoint{1.317cm}{0.055cm}}{\pgfqpoint{1.343cm}{0.081cm}}
\pgfpathcurveto{\pgfqpoint{1.369cm}{0.107cm}}{\pgfqpoint{1.383cm}{0.141cm}}{\pgfqpoint{1.383cm}{0.178cm}}
\pgfusepath{stroke}
\end{pgfscope}
\end{pgfscope}
\end{pgfscope}
\end{pgfscope}
\end{tikzpicture}}}}
\newcommand{^{\prime\!\resizebox{!}{.8em}{
\begin{tikzpicture}
\pgfpathmoveto{\pgfqpoint{0cm}{-0.035cm}}
\pgfpathlineto{\pgfqpoint{1.976cm}{-0.035cm}}
\pgfpathlineto{\pgfqpoint{1.976cm}{1.94cm}}
\pgfpathlineto{\pgfqpoint{0cm}{1.94cm}}
\pgfpathclose
\pgfusepath{clip}
\begin{pgfscope}
\begin{pgfscope}
\pgfpathmoveto{\pgfqpoint{0cm}{-0.035cm}}
\pgfpathlineto{\pgfqpoint{1.976cm}{-0.035cm}}
\pgfpathlineto{\pgfqpoint{1.976cm}{1.94cm}}
\pgfpathlineto{\pgfqpoint{0cm}{1.94cm}}
\pgfpathclose
\pgfusepath{clip}
\begin{pgfscope}
\begin{pgfscope}
\pgfsetdash{}{0cm}
\pgfsetlinewidth{0.818mm}
\pgfsetroundcap
\pgfsetroundjoin
\pgfsetmiterlimit{7.0}
\definecolor{eps2pgf_color}{gray}{0}\pgfsetstrokecolor{eps2pgf_color}\pgfsetfillcolor{eps2pgf_color}
\pgfpathmoveto{\pgfqpoint{0.117cm}{1.815cm}}
\pgfpathlineto{\pgfqpoint{0.682cm}{1.065cm}}
\pgfpathlineto{\pgfqpoint{1.246cm}{1.815cm}}
\pgfusepath{stroke}
\end{pgfscope}
\definecolor{eps2pgf_color}{gray}{0}\pgfsetstrokecolor{eps2pgf_color}\pgfsetfillcolor{eps2pgf_color}
\pgfpathmoveto{\pgfqpoint{0.273cm}{1.789cm}}
\pgfpathcurveto{\pgfqpoint{0.273cm}{1.825cm}}{\pgfqpoint{0.259cm}{1.86cm}}{\pgfqpoint{0.233cm}{1.886cm}}
\pgfpathcurveto{\pgfqpoint{0.207cm}{1.912cm}}{\pgfqpoint{0.173cm}{1.926cm}}{\pgfqpoint{0.137cm}{1.926cm}}
\pgfpathcurveto{\pgfqpoint{0.1cm}{1.926cm}}{\pgfqpoint{0.066cm}{1.912cm}}{\pgfqpoint{0.04cm}{1.886cm}}
\pgfpathcurveto{\pgfqpoint{0.014cm}{1.86cm}}{\pgfqpoint{0cm}{1.825cm}}{\pgfqpoint{0cm}{1.789cm}}
\pgfpathcurveto{\pgfqpoint{0cm}{1.753cm}}{\pgfqpoint{0.014cm}{1.718cm}}{\pgfqpoint{0.04cm}{1.692cm}}
\pgfpathcurveto{\pgfqpoint{0.066cm}{1.667cm}}{\pgfqpoint{0.1cm}{1.652cm}}{\pgfqpoint{0.137cm}{1.652cm}}
\pgfpathcurveto{\pgfqpoint{0.173cm}{1.652cm}}{\pgfqpoint{0.207cm}{1.667cm}}{\pgfqpoint{0.233cm}{1.692cm}}
\pgfpathcurveto{\pgfqpoint{0.259cm}{1.718cm}}{\pgfqpoint{0.273cm}{1.753cm}}{\pgfqpoint{0.273cm}{1.789cm}}
\pgfusepath{fill}
\begin{pgfscope}
\pgfsetdash{}{0cm}
\pgfsetlinewidth{0.818mm}
\pgfsetmiterlimit{7.0}
\pgfpathmoveto{\pgfqpoint{0.682cm}{1.065cm}}
\pgfpathlineto{\pgfqpoint{0.679cm}{1.812cm}}
\pgfusepath{stroke}
\end{pgfscope}
\pgfpathmoveto{\pgfqpoint{0.815cm}{1.793cm}}
\pgfpathcurveto{\pgfqpoint{0.815cm}{1.829cm}}{\pgfqpoint{0.801cm}{1.864cm}}{\pgfqpoint{0.775cm}{1.89cm}}
\pgfpathcurveto{\pgfqpoint{0.75cm}{1.915cm}}{\pgfqpoint{0.715cm}{1.93cm}}{\pgfqpoint{0.679cm}{1.93cm}}
\pgfpathcurveto{\pgfqpoint{0.643cm}{1.93cm}}{\pgfqpoint{0.608cm}{1.915cm}}{\pgfqpoint{0.582cm}{1.89cm}}
\pgfpathcurveto{\pgfqpoint{0.557cm}{1.864cm}}{\pgfqpoint{0.542cm}{1.829cm}}{\pgfqpoint{0.542cm}{1.793cm}}
\pgfpathcurveto{\pgfqpoint{0.542cm}{1.756cm}}{\pgfqpoint{0.557cm}{1.722cm}}{\pgfqpoint{0.582cm}{1.696cm}}
\pgfpathcurveto{\pgfqpoint{0.608cm}{1.67cm}}{\pgfqpoint{0.643cm}{1.656cm}}{\pgfqpoint{0.679cm}{1.656cm}}
\pgfpathcurveto{\pgfqpoint{0.715cm}{1.656cm}}{\pgfqpoint{0.75cm}{1.67cm}}{\pgfqpoint{0.775cm}{1.696cm}}
\pgfpathcurveto{\pgfqpoint{0.801cm}{1.722cm}}{\pgfqpoint{0.815cm}{1.756cm}}{\pgfqpoint{0.815cm}{1.793cm}}
\pgfusepath{fill}
\pgfpathmoveto{\pgfqpoint{1.345cm}{1.765cm}}
\pgfpathcurveto{\pgfqpoint{1.345cm}{1.801cm}}{\pgfqpoint{1.331cm}{1.836cm}}{\pgfqpoint{1.305cm}{1.862cm}}
\pgfpathcurveto{\pgfqpoint{1.28cm}{1.887cm}}{\pgfqpoint{1.245cm}{1.902cm}}{\pgfqpoint{1.209cm}{1.902cm}}
\pgfpathcurveto{\pgfqpoint{1.172cm}{1.902cm}}{\pgfqpoint{1.138cm}{1.887cm}}{\pgfqpoint{1.112cm}{1.862cm}}
\pgfpathcurveto{\pgfqpoint{1.087cm}{1.836cm}}{\pgfqpoint{1.072cm}{1.801cm}}{\pgfqpoint{1.072cm}{1.765cm}}
\pgfpathcurveto{\pgfqpoint{1.072cm}{1.728cm}}{\pgfqpoint{1.087cm}{1.694cm}}{\pgfqpoint{1.112cm}{1.668cm}}
\pgfpathcurveto{\pgfqpoint{1.138cm}{1.642cm}}{\pgfqpoint{1.172cm}{1.628cm}}{\pgfqpoint{1.209cm}{1.628cm}}
\pgfpathcurveto{\pgfqpoint{1.245cm}{1.628cm}}{\pgfqpoint{1.28cm}{1.642cm}}{\pgfqpoint{1.305cm}{1.668cm}}
\pgfpathcurveto{\pgfqpoint{1.331cm}{1.694cm}}{\pgfqpoint{1.345cm}{1.728cm}}{\pgfqpoint{1.345cm}{1.765cm}}
\pgfusepath{fill}
\begin{pgfscope}
\pgfsetdash{}{0cm}
\pgfsetlinewidth{0.818mm}
\pgfsetroundcap
\pgfsetroundjoin
\pgfsetmiterlimit{7.0}
\pgfpathmoveto{\pgfqpoint{0.682cm}{1.065cm}}
\pgfpathlineto{\pgfqpoint{1.246cm}{0.315cm}}
\pgfpathlineto{\pgfqpoint{1.811cm}{1.065cm}}
\pgfusepath{stroke}
\end{pgfscope}
\pgfpathmoveto{\pgfqpoint{1.948cm}{1.065cm}}
\pgfpathcurveto{\pgfqpoint{1.948cm}{1.101cm}}{\pgfqpoint{1.933cm}{1.136cm}}{\pgfqpoint{1.907cm}{1.162cm}}
\pgfpathcurveto{\pgfqpoint{1.882cm}{1.187cm}}{\pgfqpoint{1.847cm}{1.202cm}}{\pgfqpoint{1.811cm}{1.202cm}}
\pgfpathcurveto{\pgfqpoint{1.775cm}{1.202cm}}{\pgfqpoint{1.74cm}{1.187cm}}{\pgfqpoint{1.714cm}{1.162cm}}
\pgfpathcurveto{\pgfqpoint{1.689cm}{1.136cm}}{\pgfqpoint{1.674cm}{1.101cm}}{\pgfqpoint{1.674cm}{1.065cm}}
\pgfpathcurveto{\pgfqpoint{1.674cm}{1.029cm}}{\pgfqpoint{1.689cm}{0.994cm}}{\pgfqpoint{1.714cm}{0.968cm}}
\pgfpathcurveto{\pgfqpoint{1.74cm}{0.942cm}}{\pgfqpoint{1.775cm}{0.928cm}}{\pgfqpoint{1.811cm}{0.928cm}}
\pgfpathcurveto{\pgfqpoint{1.847cm}{0.928cm}}{\pgfqpoint{1.882cm}{0.942cm}}{\pgfqpoint{1.907cm}{0.968cm}}
\pgfpathcurveto{\pgfqpoint{1.933cm}{0.994cm}}{\pgfqpoint{1.948cm}{1.029cm}}{\pgfqpoint{1.948cm}{1.065cm}}
\pgfusepath{fill}
\begin{pgfscope}
\pgfsetdash{}{0cm}
\pgfsetlinewidth{0.818mm}
\pgfsetmiterlimit{7.0}
\pgfpathmoveto{\pgfqpoint{1.246cm}{0.315cm}}
\pgfpathlineto{\pgfqpoint{1.244cm}{1.061cm}}
\pgfusepath{stroke}
\end{pgfscope}
\pgfpathmoveto{\pgfqpoint{1.38cm}{1.065cm}}
\pgfpathcurveto{\pgfqpoint{1.38cm}{1.101cm}}{\pgfqpoint{1.366cm}{1.136cm}}{\pgfqpoint{1.34cm}{1.162cm}}
\pgfpathcurveto{\pgfqpoint{1.315cm}{1.187cm}}{\pgfqpoint{1.28cm}{1.202cm}}{\pgfqpoint{1.244cm}{1.202cm}}
\pgfpathcurveto{\pgfqpoint{1.207cm}{1.202cm}}{\pgfqpoint{1.173cm}{1.187cm}}{\pgfqpoint{1.147cm}{1.162cm}}
\pgfpathcurveto{\pgfqpoint{1.121cm}{1.136cm}}{\pgfqpoint{1.107cm}{1.101cm}}{\pgfqpoint{1.107cm}{1.065cm}}
\pgfpathcurveto{\pgfqpoint{1.107cm}{1.029cm}}{\pgfqpoint{1.121cm}{0.994cm}}{\pgfqpoint{1.147cm}{0.968cm}}
\pgfpathcurveto{\pgfqpoint{1.173cm}{0.942cm}}{\pgfqpoint{1.207cm}{0.928cm}}{\pgfqpoint{1.244cm}{0.928cm}}
\pgfpathcurveto{\pgfqpoint{1.28cm}{0.928cm}}{\pgfqpoint{1.315cm}{0.942cm}}{\pgfqpoint{1.34cm}{0.968cm}}
\pgfpathcurveto{\pgfqpoint{1.366cm}{0.994cm}}{\pgfqpoint{1.38cm}{1.029cm}}{\pgfqpoint{1.38cm}{1.065cm}}
\pgfusepath{fill}
\begin{pgfscope}
\pgfsetdash{}{0cm}
\pgfsetlinewidth{0.818mm}
\pgfsetmiterlimit{4.0}
\pgfpathmoveto{\pgfqpoint{1.383cm}{0.178cm}}
\pgfpathcurveto{\pgfqpoint{1.383cm}{0.214cm}}{\pgfqpoint{1.369cm}{0.249cm}}{\pgfqpoint{1.343cm}{0.275cm}}
\pgfpathcurveto{\pgfqpoint{1.317cm}{0.3cm}}{\pgfqpoint{1.283cm}{0.315cm}}{\pgfqpoint{1.246cm}{0.315cm}}
\pgfpathcurveto{\pgfqpoint{1.21cm}{0.315cm}}{\pgfqpoint{1.175cm}{0.3cm}}{\pgfqpoint{1.15cm}{0.275cm}}
\pgfpathcurveto{\pgfqpoint{1.124cm}{0.249cm}}{\pgfqpoint{1.11cm}{0.214cm}}{\pgfqpoint{1.11cm}{0.178cm}}
\pgfpathcurveto{\pgfqpoint{1.11cm}{0.141cm}}{\pgfqpoint{1.124cm}{0.107cm}}{\pgfqpoint{1.15cm}{0.081cm}}
\pgfpathcurveto{\pgfqpoint{1.175cm}{0.055cm}}{\pgfqpoint{1.21cm}{0.041cm}}{\pgfqpoint{1.246cm}{0.041cm}}
\pgfpathcurveto{\pgfqpoint{1.283cm}{0.041cm}}{\pgfqpoint{1.317cm}{0.055cm}}{\pgfqpoint{1.343cm}{0.081cm}}
\pgfpathcurveto{\pgfqpoint{1.369cm}{0.107cm}}{\pgfqpoint{1.383cm}{0.141cm}}{\pgfqpoint{1.383cm}{0.178cm}}
\pgfusepath{stroke}
\end{pgfscope}
\end{pgfscope}
\end{pgfscope}
\end{pgfscope}
\end{tikzpicture}}}}[1]{#1^{\prime\!\resizebox{!}{.8em}{
\begin{tikzpicture}
\pgfpathmoveto{\pgfqpoint{0cm}{-0.035cm}}
\pgfpathlineto{\pgfqpoint{1.976cm}{-0.035cm}}
\pgfpathlineto{\pgfqpoint{1.976cm}{1.94cm}}
\pgfpathlineto{\pgfqpoint{0cm}{1.94cm}}
\pgfpathclose
\pgfusepath{clip}
\begin{pgfscope}
\begin{pgfscope}
\pgfpathmoveto{\pgfqpoint{0cm}{-0.035cm}}
\pgfpathlineto{\pgfqpoint{1.976cm}{-0.035cm}}
\pgfpathlineto{\pgfqpoint{1.976cm}{1.94cm}}
\pgfpathlineto{\pgfqpoint{0cm}{1.94cm}}
\pgfpathclose
\pgfusepath{clip}
\begin{pgfscope}
\begin{pgfscope}
\pgfsetdash{}{0cm}
\pgfsetlinewidth{0.818mm}
\pgfsetroundcap
\pgfsetroundjoin
\pgfsetmiterlimit{7.0}
\definecolor{eps2pgf_color}{gray}{0}\pgfsetstrokecolor{eps2pgf_color}\pgfsetfillcolor{eps2pgf_color}
\pgfpathmoveto{\pgfqpoint{0.117cm}{1.815cm}}
\pgfpathlineto{\pgfqpoint{0.682cm}{1.065cm}}
\pgfpathlineto{\pgfqpoint{1.246cm}{1.815cm}}
\pgfusepath{stroke}
\end{pgfscope}
\definecolor{eps2pgf_color}{gray}{0}\pgfsetstrokecolor{eps2pgf_color}\pgfsetfillcolor{eps2pgf_color}
\pgfpathmoveto{\pgfqpoint{0.273cm}{1.789cm}}
\pgfpathcurveto{\pgfqpoint{0.273cm}{1.825cm}}{\pgfqpoint{0.259cm}{1.86cm}}{\pgfqpoint{0.233cm}{1.886cm}}
\pgfpathcurveto{\pgfqpoint{0.207cm}{1.912cm}}{\pgfqpoint{0.173cm}{1.926cm}}{\pgfqpoint{0.137cm}{1.926cm}}
\pgfpathcurveto{\pgfqpoint{0.1cm}{1.926cm}}{\pgfqpoint{0.066cm}{1.912cm}}{\pgfqpoint{0.04cm}{1.886cm}}
\pgfpathcurveto{\pgfqpoint{0.014cm}{1.86cm}}{\pgfqpoint{0cm}{1.825cm}}{\pgfqpoint{0cm}{1.789cm}}
\pgfpathcurveto{\pgfqpoint{0cm}{1.753cm}}{\pgfqpoint{0.014cm}{1.718cm}}{\pgfqpoint{0.04cm}{1.692cm}}
\pgfpathcurveto{\pgfqpoint{0.066cm}{1.667cm}}{\pgfqpoint{0.1cm}{1.652cm}}{\pgfqpoint{0.137cm}{1.652cm}}
\pgfpathcurveto{\pgfqpoint{0.173cm}{1.652cm}}{\pgfqpoint{0.207cm}{1.667cm}}{\pgfqpoint{0.233cm}{1.692cm}}
\pgfpathcurveto{\pgfqpoint{0.259cm}{1.718cm}}{\pgfqpoint{0.273cm}{1.753cm}}{\pgfqpoint{0.273cm}{1.789cm}}
\pgfusepath{fill}
\begin{pgfscope}
\pgfsetdash{}{0cm}
\pgfsetlinewidth{0.818mm}
\pgfsetmiterlimit{7.0}
\pgfpathmoveto{\pgfqpoint{0.682cm}{1.065cm}}
\pgfpathlineto{\pgfqpoint{0.679cm}{1.812cm}}
\pgfusepath{stroke}
\end{pgfscope}
\pgfpathmoveto{\pgfqpoint{0.815cm}{1.793cm}}
\pgfpathcurveto{\pgfqpoint{0.815cm}{1.829cm}}{\pgfqpoint{0.801cm}{1.864cm}}{\pgfqpoint{0.775cm}{1.89cm}}
\pgfpathcurveto{\pgfqpoint{0.75cm}{1.915cm}}{\pgfqpoint{0.715cm}{1.93cm}}{\pgfqpoint{0.679cm}{1.93cm}}
\pgfpathcurveto{\pgfqpoint{0.643cm}{1.93cm}}{\pgfqpoint{0.608cm}{1.915cm}}{\pgfqpoint{0.582cm}{1.89cm}}
\pgfpathcurveto{\pgfqpoint{0.557cm}{1.864cm}}{\pgfqpoint{0.542cm}{1.829cm}}{\pgfqpoint{0.542cm}{1.793cm}}
\pgfpathcurveto{\pgfqpoint{0.542cm}{1.756cm}}{\pgfqpoint{0.557cm}{1.722cm}}{\pgfqpoint{0.582cm}{1.696cm}}
\pgfpathcurveto{\pgfqpoint{0.608cm}{1.67cm}}{\pgfqpoint{0.643cm}{1.656cm}}{\pgfqpoint{0.679cm}{1.656cm}}
\pgfpathcurveto{\pgfqpoint{0.715cm}{1.656cm}}{\pgfqpoint{0.75cm}{1.67cm}}{\pgfqpoint{0.775cm}{1.696cm}}
\pgfpathcurveto{\pgfqpoint{0.801cm}{1.722cm}}{\pgfqpoint{0.815cm}{1.756cm}}{\pgfqpoint{0.815cm}{1.793cm}}
\pgfusepath{fill}
\pgfpathmoveto{\pgfqpoint{1.345cm}{1.765cm}}
\pgfpathcurveto{\pgfqpoint{1.345cm}{1.801cm}}{\pgfqpoint{1.331cm}{1.836cm}}{\pgfqpoint{1.305cm}{1.862cm}}
\pgfpathcurveto{\pgfqpoint{1.28cm}{1.887cm}}{\pgfqpoint{1.245cm}{1.902cm}}{\pgfqpoint{1.209cm}{1.902cm}}
\pgfpathcurveto{\pgfqpoint{1.172cm}{1.902cm}}{\pgfqpoint{1.138cm}{1.887cm}}{\pgfqpoint{1.112cm}{1.862cm}}
\pgfpathcurveto{\pgfqpoint{1.087cm}{1.836cm}}{\pgfqpoint{1.072cm}{1.801cm}}{\pgfqpoint{1.072cm}{1.765cm}}
\pgfpathcurveto{\pgfqpoint{1.072cm}{1.728cm}}{\pgfqpoint{1.087cm}{1.694cm}}{\pgfqpoint{1.112cm}{1.668cm}}
\pgfpathcurveto{\pgfqpoint{1.138cm}{1.642cm}}{\pgfqpoint{1.172cm}{1.628cm}}{\pgfqpoint{1.209cm}{1.628cm}}
\pgfpathcurveto{\pgfqpoint{1.245cm}{1.628cm}}{\pgfqpoint{1.28cm}{1.642cm}}{\pgfqpoint{1.305cm}{1.668cm}}
\pgfpathcurveto{\pgfqpoint{1.331cm}{1.694cm}}{\pgfqpoint{1.345cm}{1.728cm}}{\pgfqpoint{1.345cm}{1.765cm}}
\pgfusepath{fill}
\begin{pgfscope}
\pgfsetdash{}{0cm}
\pgfsetlinewidth{0.818mm}
\pgfsetroundcap
\pgfsetroundjoin
\pgfsetmiterlimit{7.0}
\pgfpathmoveto{\pgfqpoint{0.682cm}{1.065cm}}
\pgfpathlineto{\pgfqpoint{1.246cm}{0.315cm}}
\pgfpathlineto{\pgfqpoint{1.811cm}{1.065cm}}
\pgfusepath{stroke}
\end{pgfscope}
\pgfpathmoveto{\pgfqpoint{1.948cm}{1.065cm}}
\pgfpathcurveto{\pgfqpoint{1.948cm}{1.101cm}}{\pgfqpoint{1.933cm}{1.136cm}}{\pgfqpoint{1.907cm}{1.162cm}}
\pgfpathcurveto{\pgfqpoint{1.882cm}{1.187cm}}{\pgfqpoint{1.847cm}{1.202cm}}{\pgfqpoint{1.811cm}{1.202cm}}
\pgfpathcurveto{\pgfqpoint{1.775cm}{1.202cm}}{\pgfqpoint{1.74cm}{1.187cm}}{\pgfqpoint{1.714cm}{1.162cm}}
\pgfpathcurveto{\pgfqpoint{1.689cm}{1.136cm}}{\pgfqpoint{1.674cm}{1.101cm}}{\pgfqpoint{1.674cm}{1.065cm}}
\pgfpathcurveto{\pgfqpoint{1.674cm}{1.029cm}}{\pgfqpoint{1.689cm}{0.994cm}}{\pgfqpoint{1.714cm}{0.968cm}}
\pgfpathcurveto{\pgfqpoint{1.74cm}{0.942cm}}{\pgfqpoint{1.775cm}{0.928cm}}{\pgfqpoint{1.811cm}{0.928cm}}
\pgfpathcurveto{\pgfqpoint{1.847cm}{0.928cm}}{\pgfqpoint{1.882cm}{0.942cm}}{\pgfqpoint{1.907cm}{0.968cm}}
\pgfpathcurveto{\pgfqpoint{1.933cm}{0.994cm}}{\pgfqpoint{1.948cm}{1.029cm}}{\pgfqpoint{1.948cm}{1.065cm}}
\pgfusepath{fill}
\begin{pgfscope}
\pgfsetdash{}{0cm}
\pgfsetlinewidth{0.818mm}
\pgfsetmiterlimit{7.0}
\pgfpathmoveto{\pgfqpoint{1.246cm}{0.315cm}}
\pgfpathlineto{\pgfqpoint{1.244cm}{1.061cm}}
\pgfusepath{stroke}
\end{pgfscope}
\pgfpathmoveto{\pgfqpoint{1.38cm}{1.065cm}}
\pgfpathcurveto{\pgfqpoint{1.38cm}{1.101cm}}{\pgfqpoint{1.366cm}{1.136cm}}{\pgfqpoint{1.34cm}{1.162cm}}
\pgfpathcurveto{\pgfqpoint{1.315cm}{1.187cm}}{\pgfqpoint{1.28cm}{1.202cm}}{\pgfqpoint{1.244cm}{1.202cm}}
\pgfpathcurveto{\pgfqpoint{1.207cm}{1.202cm}}{\pgfqpoint{1.173cm}{1.187cm}}{\pgfqpoint{1.147cm}{1.162cm}}
\pgfpathcurveto{\pgfqpoint{1.121cm}{1.136cm}}{\pgfqpoint{1.107cm}{1.101cm}}{\pgfqpoint{1.107cm}{1.065cm}}
\pgfpathcurveto{\pgfqpoint{1.107cm}{1.029cm}}{\pgfqpoint{1.121cm}{0.994cm}}{\pgfqpoint{1.147cm}{0.968cm}}
\pgfpathcurveto{\pgfqpoint{1.173cm}{0.942cm}}{\pgfqpoint{1.207cm}{0.928cm}}{\pgfqpoint{1.244cm}{0.928cm}}
\pgfpathcurveto{\pgfqpoint{1.28cm}{0.928cm}}{\pgfqpoint{1.315cm}{0.942cm}}{\pgfqpoint{1.34cm}{0.968cm}}
\pgfpathcurveto{\pgfqpoint{1.366cm}{0.994cm}}{\pgfqpoint{1.38cm}{1.029cm}}{\pgfqpoint{1.38cm}{1.065cm}}
\pgfusepath{fill}
\begin{pgfscope}
\pgfsetdash{}{0cm}
\pgfsetlinewidth{0.818mm}
\pgfsetmiterlimit{4.0}
\pgfpathmoveto{\pgfqpoint{1.383cm}{0.178cm}}
\pgfpathcurveto{\pgfqpoint{1.383cm}{0.214cm}}{\pgfqpoint{1.369cm}{0.249cm}}{\pgfqpoint{1.343cm}{0.275cm}}
\pgfpathcurveto{\pgfqpoint{1.317cm}{0.3cm}}{\pgfqpoint{1.283cm}{0.315cm}}{\pgfqpoint{1.246cm}{0.315cm}}
\pgfpathcurveto{\pgfqpoint{1.21cm}{0.315cm}}{\pgfqpoint{1.175cm}{0.3cm}}{\pgfqpoint{1.15cm}{0.275cm}}
\pgfpathcurveto{\pgfqpoint{1.124cm}{0.249cm}}{\pgfqpoint{1.11cm}{0.214cm}}{\pgfqpoint{1.11cm}{0.178cm}}
\pgfpathcurveto{\pgfqpoint{1.11cm}{0.141cm}}{\pgfqpoint{1.124cm}{0.107cm}}{\pgfqpoint{1.15cm}{0.081cm}}
\pgfpathcurveto{\pgfqpoint{1.175cm}{0.055cm}}{\pgfqpoint{1.21cm}{0.041cm}}{\pgfqpoint{1.246cm}{0.041cm}}
\pgfpathcurveto{\pgfqpoint{1.283cm}{0.041cm}}{\pgfqpoint{1.317cm}{0.055cm}}{\pgfqpoint{1.343cm}{0.081cm}}
\pgfpathcurveto{\pgfqpoint{1.369cm}{0.107cm}}{\pgfqpoint{1.383cm}{0.141cm}}{\pgfqpoint{1.383cm}{0.178cm}}
\pgfusepath{stroke}
\end{pgfscope}
\end{pgfscope}
\end{pgfscope}
\end{pgfscope}
\end{tikzpicture}}}}
\newcommand{^{\!\resizebox{!}{.8em}{
\begin{tikzpicture}
\pgfpathmoveto{\pgfqpoint{0cm}{-0.035cm}}
\pgfpathlineto{\pgfqpoint{1.976cm}{-0.035cm}}
\pgfpathlineto{\pgfqpoint{1.976cm}{1.94cm}}
\pgfpathlineto{\pgfqpoint{0cm}{1.94cm}}
\pgfpathclose
\pgfusepath{clip}
\begin{pgfscope}
\begin{pgfscope}
\pgfpathmoveto{\pgfqpoint{0cm}{-0.035cm}}
\pgfpathlineto{\pgfqpoint{1.976cm}{-0.035cm}}
\pgfpathlineto{\pgfqpoint{1.976cm}{1.94cm}}
\pgfpathlineto{\pgfqpoint{0cm}{1.94cm}}
\pgfpathclose
\pgfusepath{clip}
\begin{pgfscope}
\begin{pgfscope}
\pgfsetdash{}{0cm}
\pgfsetlinewidth{0.818mm}
\pgfsetroundcap
\pgfsetroundjoin
\pgfsetmiterlimit{7.0}
\definecolor{eps2pgf_color}{gray}{0}\pgfsetstrokecolor{eps2pgf_color}\pgfsetfillcolor{eps2pgf_color}
\pgfpathmoveto{\pgfqpoint{0.117cm}{1.815cm}}
\pgfpathlineto{\pgfqpoint{0.682cm}{1.065cm}}
\pgfpathlineto{\pgfqpoint{1.246cm}{1.815cm}}
\pgfusepath{stroke}
\end{pgfscope}
\definecolor{eps2pgf_color}{gray}{0}\pgfsetstrokecolor{eps2pgf_color}\pgfsetfillcolor{eps2pgf_color}
\pgfpathmoveto{\pgfqpoint{0.273cm}{1.789cm}}
\pgfpathcurveto{\pgfqpoint{0.273cm}{1.825cm}}{\pgfqpoint{0.259cm}{1.86cm}}{\pgfqpoint{0.233cm}{1.886cm}}
\pgfpathcurveto{\pgfqpoint{0.207cm}{1.912cm}}{\pgfqpoint{0.173cm}{1.926cm}}{\pgfqpoint{0.137cm}{1.926cm}}
\pgfpathcurveto{\pgfqpoint{0.1cm}{1.926cm}}{\pgfqpoint{0.066cm}{1.912cm}}{\pgfqpoint{0.04cm}{1.886cm}}
\pgfpathcurveto{\pgfqpoint{0.014cm}{1.86cm}}{\pgfqpoint{0cm}{1.825cm}}{\pgfqpoint{0cm}{1.789cm}}
\pgfpathcurveto{\pgfqpoint{0cm}{1.753cm}}{\pgfqpoint{0.014cm}{1.718cm}}{\pgfqpoint{0.04cm}{1.692cm}}
\pgfpathcurveto{\pgfqpoint{0.066cm}{1.667cm}}{\pgfqpoint{0.1cm}{1.652cm}}{\pgfqpoint{0.137cm}{1.652cm}}
\pgfpathcurveto{\pgfqpoint{0.173cm}{1.652cm}}{\pgfqpoint{0.207cm}{1.667cm}}{\pgfqpoint{0.233cm}{1.692cm}}
\pgfpathcurveto{\pgfqpoint{0.259cm}{1.718cm}}{\pgfqpoint{0.273cm}{1.753cm}}{\pgfqpoint{0.273cm}{1.789cm}}
\pgfusepath{fill}
\pgfpathmoveto{\pgfqpoint{1.345cm}{1.765cm}}
\pgfpathcurveto{\pgfqpoint{1.345cm}{1.801cm}}{\pgfqpoint{1.331cm}{1.836cm}}{\pgfqpoint{1.305cm}{1.862cm}}
\pgfpathcurveto{\pgfqpoint{1.28cm}{1.887cm}}{\pgfqpoint{1.245cm}{1.902cm}}{\pgfqpoint{1.209cm}{1.902cm}}
\pgfpathcurveto{\pgfqpoint{1.172cm}{1.902cm}}{\pgfqpoint{1.138cm}{1.887cm}}{\pgfqpoint{1.112cm}{1.862cm}}
\pgfpathcurveto{\pgfqpoint{1.087cm}{1.836cm}}{\pgfqpoint{1.072cm}{1.801cm}}{\pgfqpoint{1.072cm}{1.765cm}}
\pgfpathcurveto{\pgfqpoint{1.072cm}{1.728cm}}{\pgfqpoint{1.087cm}{1.694cm}}{\pgfqpoint{1.112cm}{1.668cm}}
\pgfpathcurveto{\pgfqpoint{1.138cm}{1.642cm}}{\pgfqpoint{1.172cm}{1.628cm}}{\pgfqpoint{1.209cm}{1.628cm}}
\pgfpathcurveto{\pgfqpoint{1.245cm}{1.628cm}}{\pgfqpoint{1.28cm}{1.642cm}}{\pgfqpoint{1.305cm}{1.668cm}}
\pgfpathcurveto{\pgfqpoint{1.331cm}{1.694cm}}{\pgfqpoint{1.345cm}{1.728cm}}{\pgfqpoint{1.345cm}{1.765cm}}
\pgfusepath{fill}
\begin{pgfscope}
\pgfsetdash{}{0cm}
\pgfsetlinewidth{0.818mm}
\pgfsetroundcap
\pgfsetroundjoin
\pgfsetmiterlimit{7.0}
\pgfpathmoveto{\pgfqpoint{0.682cm}{1.065cm}}
\pgfpathlineto{\pgfqpoint{1.246cm}{0.315cm}}
\pgfpathlineto{\pgfqpoint{1.811cm}{1.065cm}}
\pgfusepath{stroke}
\end{pgfscope}
\pgfpathmoveto{\pgfqpoint{1.948cm}{1.065cm}}
\pgfpathcurveto{\pgfqpoint{1.948cm}{1.101cm}}{\pgfqpoint{1.933cm}{1.136cm}}{\pgfqpoint{1.907cm}{1.162cm}}
\pgfpathcurveto{\pgfqpoint{1.882cm}{1.187cm}}{\pgfqpoint{1.847cm}{1.202cm}}{\pgfqpoint{1.811cm}{1.202cm}}
\pgfpathcurveto{\pgfqpoint{1.775cm}{1.202cm}}{\pgfqpoint{1.74cm}{1.187cm}}{\pgfqpoint{1.714cm}{1.162cm}}
\pgfpathcurveto{\pgfqpoint{1.689cm}{1.136cm}}{\pgfqpoint{1.674cm}{1.101cm}}{\pgfqpoint{1.674cm}{1.065cm}}
\pgfpathcurveto{\pgfqpoint{1.674cm}{1.029cm}}{\pgfqpoint{1.689cm}{0.994cm}}{\pgfqpoint{1.714cm}{0.968cm}}
\pgfpathcurveto{\pgfqpoint{1.74cm}{0.942cm}}{\pgfqpoint{1.775cm}{0.928cm}}{\pgfqpoint{1.811cm}{0.928cm}}
\pgfpathcurveto{\pgfqpoint{1.847cm}{0.928cm}}{\pgfqpoint{1.882cm}{0.942cm}}{\pgfqpoint{1.907cm}{0.968cm}}
\pgfpathcurveto{\pgfqpoint{1.933cm}{0.994cm}}{\pgfqpoint{1.948cm}{1.029cm}}{\pgfqpoint{1.948cm}{1.065cm}}
\pgfusepath{fill}
\begin{pgfscope}
\pgfsetdash{}{0cm}
\pgfsetlinewidth{0.818mm}
\pgfsetmiterlimit{7.0}
\pgfpathmoveto{\pgfqpoint{1.246cm}{0.315cm}}
\pgfpathlineto{\pgfqpoint{1.244cm}{1.061cm}}
\pgfusepath{stroke}
\end{pgfscope}
\pgfpathmoveto{\pgfqpoint{1.38cm}{1.065cm}}
\pgfpathcurveto{\pgfqpoint{1.38cm}{1.101cm}}{\pgfqpoint{1.366cm}{1.136cm}}{\pgfqpoint{1.34cm}{1.162cm}}
\pgfpathcurveto{\pgfqpoint{1.315cm}{1.187cm}}{\pgfqpoint{1.28cm}{1.202cm}}{\pgfqpoint{1.244cm}{1.202cm}}
\pgfpathcurveto{\pgfqpoint{1.207cm}{1.202cm}}{\pgfqpoint{1.173cm}{1.187cm}}{\pgfqpoint{1.147cm}{1.162cm}}
\pgfpathcurveto{\pgfqpoint{1.121cm}{1.136cm}}{\pgfqpoint{1.107cm}{1.101cm}}{\pgfqpoint{1.107cm}{1.065cm}}
\pgfpathcurveto{\pgfqpoint{1.107cm}{1.029cm}}{\pgfqpoint{1.121cm}{0.994cm}}{\pgfqpoint{1.147cm}{0.968cm}}
\pgfpathcurveto{\pgfqpoint{1.173cm}{0.942cm}}{\pgfqpoint{1.207cm}{0.928cm}}{\pgfqpoint{1.244cm}{0.928cm}}
\pgfpathcurveto{\pgfqpoint{1.28cm}{0.928cm}}{\pgfqpoint{1.315cm}{0.942cm}}{\pgfqpoint{1.34cm}{0.968cm}}
\pgfpathcurveto{\pgfqpoint{1.366cm}{0.994cm}}{\pgfqpoint{1.38cm}{1.029cm}}{\pgfqpoint{1.38cm}{1.065cm}}
\pgfusepath{fill}
\begin{pgfscope}
\pgfsetdash{}{0cm}
\pgfsetlinewidth{0.818mm}
\pgfsetmiterlimit{4.0}
\pgfpathmoveto{\pgfqpoint{1.383cm}{0.178cm}}
\pgfpathcurveto{\pgfqpoint{1.383cm}{0.214cm}}{\pgfqpoint{1.369cm}{0.249cm}}{\pgfqpoint{1.343cm}{0.275cm}}
\pgfpathcurveto{\pgfqpoint{1.317cm}{0.3cm}}{\pgfqpoint{1.283cm}{0.315cm}}{\pgfqpoint{1.246cm}{0.315cm}}
\pgfpathcurveto{\pgfqpoint{1.21cm}{0.315cm}}{\pgfqpoint{1.175cm}{0.3cm}}{\pgfqpoint{1.15cm}{0.275cm}}
\pgfpathcurveto{\pgfqpoint{1.124cm}{0.249cm}}{\pgfqpoint{1.11cm}{0.214cm}}{\pgfqpoint{1.11cm}{0.178cm}}
\pgfpathcurveto{\pgfqpoint{1.11cm}{0.141cm}}{\pgfqpoint{1.124cm}{0.107cm}}{\pgfqpoint{1.15cm}{0.081cm}}
\pgfpathcurveto{\pgfqpoint{1.175cm}{0.055cm}}{\pgfqpoint{1.21cm}{0.041cm}}{\pgfqpoint{1.246cm}{0.041cm}}
\pgfpathcurveto{\pgfqpoint{1.283cm}{0.041cm}}{\pgfqpoint{1.317cm}{0.055cm}}{\pgfqpoint{1.343cm}{0.081cm}}
\pgfpathcurveto{\pgfqpoint{1.369cm}{0.107cm}}{\pgfqpoint{1.383cm}{0.141cm}}{\pgfqpoint{1.383cm}{0.178cm}}
\pgfusepath{stroke}
\end{pgfscope}
\end{pgfscope}
\end{pgfscope}
\end{pgfscope}
\end{tikzpicture}}}}[1]{#1^{\!\resizebox{!}{.8em}{
\begin{tikzpicture}
\pgfpathmoveto{\pgfqpoint{0cm}{-0.035cm}}
\pgfpathlineto{\pgfqpoint{1.976cm}{-0.035cm}}
\pgfpathlineto{\pgfqpoint{1.976cm}{1.94cm}}
\pgfpathlineto{\pgfqpoint{0cm}{1.94cm}}
\pgfpathclose
\pgfusepath{clip}
\begin{pgfscope}
\begin{pgfscope}
\pgfpathmoveto{\pgfqpoint{0cm}{-0.035cm}}
\pgfpathlineto{\pgfqpoint{1.976cm}{-0.035cm}}
\pgfpathlineto{\pgfqpoint{1.976cm}{1.94cm}}
\pgfpathlineto{\pgfqpoint{0cm}{1.94cm}}
\pgfpathclose
\pgfusepath{clip}
\begin{pgfscope}
\begin{pgfscope}
\pgfsetdash{}{0cm}
\pgfsetlinewidth{0.818mm}
\pgfsetroundcap
\pgfsetroundjoin
\pgfsetmiterlimit{7.0}
\definecolor{eps2pgf_color}{gray}{0}\pgfsetstrokecolor{eps2pgf_color}\pgfsetfillcolor{eps2pgf_color}
\pgfpathmoveto{\pgfqpoint{0.117cm}{1.815cm}}
\pgfpathlineto{\pgfqpoint{0.682cm}{1.065cm}}
\pgfpathlineto{\pgfqpoint{1.246cm}{1.815cm}}
\pgfusepath{stroke}
\end{pgfscope}
\definecolor{eps2pgf_color}{gray}{0}\pgfsetstrokecolor{eps2pgf_color}\pgfsetfillcolor{eps2pgf_color}
\pgfpathmoveto{\pgfqpoint{0.273cm}{1.789cm}}
\pgfpathcurveto{\pgfqpoint{0.273cm}{1.825cm}}{\pgfqpoint{0.259cm}{1.86cm}}{\pgfqpoint{0.233cm}{1.886cm}}
\pgfpathcurveto{\pgfqpoint{0.207cm}{1.912cm}}{\pgfqpoint{0.173cm}{1.926cm}}{\pgfqpoint{0.137cm}{1.926cm}}
\pgfpathcurveto{\pgfqpoint{0.1cm}{1.926cm}}{\pgfqpoint{0.066cm}{1.912cm}}{\pgfqpoint{0.04cm}{1.886cm}}
\pgfpathcurveto{\pgfqpoint{0.014cm}{1.86cm}}{\pgfqpoint{0cm}{1.825cm}}{\pgfqpoint{0cm}{1.789cm}}
\pgfpathcurveto{\pgfqpoint{0cm}{1.753cm}}{\pgfqpoint{0.014cm}{1.718cm}}{\pgfqpoint{0.04cm}{1.692cm}}
\pgfpathcurveto{\pgfqpoint{0.066cm}{1.667cm}}{\pgfqpoint{0.1cm}{1.652cm}}{\pgfqpoint{0.137cm}{1.652cm}}
\pgfpathcurveto{\pgfqpoint{0.173cm}{1.652cm}}{\pgfqpoint{0.207cm}{1.667cm}}{\pgfqpoint{0.233cm}{1.692cm}}
\pgfpathcurveto{\pgfqpoint{0.259cm}{1.718cm}}{\pgfqpoint{0.273cm}{1.753cm}}{\pgfqpoint{0.273cm}{1.789cm}}
\pgfusepath{fill}
\pgfpathmoveto{\pgfqpoint{1.345cm}{1.765cm}}
\pgfpathcurveto{\pgfqpoint{1.345cm}{1.801cm}}{\pgfqpoint{1.331cm}{1.836cm}}{\pgfqpoint{1.305cm}{1.862cm}}
\pgfpathcurveto{\pgfqpoint{1.28cm}{1.887cm}}{\pgfqpoint{1.245cm}{1.902cm}}{\pgfqpoint{1.209cm}{1.902cm}}
\pgfpathcurveto{\pgfqpoint{1.172cm}{1.902cm}}{\pgfqpoint{1.138cm}{1.887cm}}{\pgfqpoint{1.112cm}{1.862cm}}
\pgfpathcurveto{\pgfqpoint{1.087cm}{1.836cm}}{\pgfqpoint{1.072cm}{1.801cm}}{\pgfqpoint{1.072cm}{1.765cm}}
\pgfpathcurveto{\pgfqpoint{1.072cm}{1.728cm}}{\pgfqpoint{1.087cm}{1.694cm}}{\pgfqpoint{1.112cm}{1.668cm}}
\pgfpathcurveto{\pgfqpoint{1.138cm}{1.642cm}}{\pgfqpoint{1.172cm}{1.628cm}}{\pgfqpoint{1.209cm}{1.628cm}}
\pgfpathcurveto{\pgfqpoint{1.245cm}{1.628cm}}{\pgfqpoint{1.28cm}{1.642cm}}{\pgfqpoint{1.305cm}{1.668cm}}
\pgfpathcurveto{\pgfqpoint{1.331cm}{1.694cm}}{\pgfqpoint{1.345cm}{1.728cm}}{\pgfqpoint{1.345cm}{1.765cm}}
\pgfusepath{fill}
\begin{pgfscope}
\pgfsetdash{}{0cm}
\pgfsetlinewidth{0.818mm}
\pgfsetroundcap
\pgfsetroundjoin
\pgfsetmiterlimit{7.0}
\pgfpathmoveto{\pgfqpoint{0.682cm}{1.065cm}}
\pgfpathlineto{\pgfqpoint{1.246cm}{0.315cm}}
\pgfpathlineto{\pgfqpoint{1.811cm}{1.065cm}}
\pgfusepath{stroke}
\end{pgfscope}
\pgfpathmoveto{\pgfqpoint{1.948cm}{1.065cm}}
\pgfpathcurveto{\pgfqpoint{1.948cm}{1.101cm}}{\pgfqpoint{1.933cm}{1.136cm}}{\pgfqpoint{1.907cm}{1.162cm}}
\pgfpathcurveto{\pgfqpoint{1.882cm}{1.187cm}}{\pgfqpoint{1.847cm}{1.202cm}}{\pgfqpoint{1.811cm}{1.202cm}}
\pgfpathcurveto{\pgfqpoint{1.775cm}{1.202cm}}{\pgfqpoint{1.74cm}{1.187cm}}{\pgfqpoint{1.714cm}{1.162cm}}
\pgfpathcurveto{\pgfqpoint{1.689cm}{1.136cm}}{\pgfqpoint{1.674cm}{1.101cm}}{\pgfqpoint{1.674cm}{1.065cm}}
\pgfpathcurveto{\pgfqpoint{1.674cm}{1.029cm}}{\pgfqpoint{1.689cm}{0.994cm}}{\pgfqpoint{1.714cm}{0.968cm}}
\pgfpathcurveto{\pgfqpoint{1.74cm}{0.942cm}}{\pgfqpoint{1.775cm}{0.928cm}}{\pgfqpoint{1.811cm}{0.928cm}}
\pgfpathcurveto{\pgfqpoint{1.847cm}{0.928cm}}{\pgfqpoint{1.882cm}{0.942cm}}{\pgfqpoint{1.907cm}{0.968cm}}
\pgfpathcurveto{\pgfqpoint{1.933cm}{0.994cm}}{\pgfqpoint{1.948cm}{1.029cm}}{\pgfqpoint{1.948cm}{1.065cm}}
\pgfusepath{fill}
\begin{pgfscope}
\pgfsetdash{}{0cm}
\pgfsetlinewidth{0.818mm}
\pgfsetmiterlimit{7.0}
\pgfpathmoveto{\pgfqpoint{1.246cm}{0.315cm}}
\pgfpathlineto{\pgfqpoint{1.244cm}{1.061cm}}
\pgfusepath{stroke}
\end{pgfscope}
\pgfpathmoveto{\pgfqpoint{1.38cm}{1.065cm}}
\pgfpathcurveto{\pgfqpoint{1.38cm}{1.101cm}}{\pgfqpoint{1.366cm}{1.136cm}}{\pgfqpoint{1.34cm}{1.162cm}}
\pgfpathcurveto{\pgfqpoint{1.315cm}{1.187cm}}{\pgfqpoint{1.28cm}{1.202cm}}{\pgfqpoint{1.244cm}{1.202cm}}
\pgfpathcurveto{\pgfqpoint{1.207cm}{1.202cm}}{\pgfqpoint{1.173cm}{1.187cm}}{\pgfqpoint{1.147cm}{1.162cm}}
\pgfpathcurveto{\pgfqpoint{1.121cm}{1.136cm}}{\pgfqpoint{1.107cm}{1.101cm}}{\pgfqpoint{1.107cm}{1.065cm}}
\pgfpathcurveto{\pgfqpoint{1.107cm}{1.029cm}}{\pgfqpoint{1.121cm}{0.994cm}}{\pgfqpoint{1.147cm}{0.968cm}}
\pgfpathcurveto{\pgfqpoint{1.173cm}{0.942cm}}{\pgfqpoint{1.207cm}{0.928cm}}{\pgfqpoint{1.244cm}{0.928cm}}
\pgfpathcurveto{\pgfqpoint{1.28cm}{0.928cm}}{\pgfqpoint{1.315cm}{0.942cm}}{\pgfqpoint{1.34cm}{0.968cm}}
\pgfpathcurveto{\pgfqpoint{1.366cm}{0.994cm}}{\pgfqpoint{1.38cm}{1.029cm}}{\pgfqpoint{1.38cm}{1.065cm}}
\pgfusepath{fill}
\begin{pgfscope}
\pgfsetdash{}{0cm}
\pgfsetlinewidth{0.818mm}
\pgfsetmiterlimit{4.0}
\pgfpathmoveto{\pgfqpoint{1.383cm}{0.178cm}}
\pgfpathcurveto{\pgfqpoint{1.383cm}{0.214cm}}{\pgfqpoint{1.369cm}{0.249cm}}{\pgfqpoint{1.343cm}{0.275cm}}
\pgfpathcurveto{\pgfqpoint{1.317cm}{0.3cm}}{\pgfqpoint{1.283cm}{0.315cm}}{\pgfqpoint{1.246cm}{0.315cm}}
\pgfpathcurveto{\pgfqpoint{1.21cm}{0.315cm}}{\pgfqpoint{1.175cm}{0.3cm}}{\pgfqpoint{1.15cm}{0.275cm}}
\pgfpathcurveto{\pgfqpoint{1.124cm}{0.249cm}}{\pgfqpoint{1.11cm}{0.214cm}}{\pgfqpoint{1.11cm}{0.178cm}}
\pgfpathcurveto{\pgfqpoint{1.11cm}{0.141cm}}{\pgfqpoint{1.124cm}{0.107cm}}{\pgfqpoint{1.15cm}{0.081cm}}
\pgfpathcurveto{\pgfqpoint{1.175cm}{0.055cm}}{\pgfqpoint{1.21cm}{0.041cm}}{\pgfqpoint{1.246cm}{0.041cm}}
\pgfpathcurveto{\pgfqpoint{1.283cm}{0.041cm}}{\pgfqpoint{1.317cm}{0.055cm}}{\pgfqpoint{1.343cm}{0.081cm}}
\pgfpathcurveto{\pgfqpoint{1.369cm}{0.107cm}}{\pgfqpoint{1.383cm}{0.141cm}}{\pgfqpoint{1.383cm}{0.178cm}}
\pgfusepath{stroke}
\end{pgfscope}
\end{pgfscope}
\end{pgfscope}
\end{pgfscope}
\end{tikzpicture}}}}
\newcommand{^{\!\resizebox{!}{.8em}{
\begin{tikzpicture}
\pgfpathmoveto{\pgfqpoint{0cm}{-0.035cm}}
\pgfpathlineto{\pgfqpoint{1.976cm}{-0.035cm}}
\pgfpathlineto{\pgfqpoint{1.976cm}{1.94cm}}
\pgfpathlineto{\pgfqpoint{0cm}{1.94cm}}
\pgfpathclose
\pgfusepath{clip}
\begin{pgfscope}
\begin{pgfscope}
\pgfpathmoveto{\pgfqpoint{0cm}{-0.035cm}}
\pgfpathlineto{\pgfqpoint{1.976cm}{-0.035cm}}
\pgfpathlineto{\pgfqpoint{1.976cm}{1.94cm}}
\pgfpathlineto{\pgfqpoint{0cm}{1.94cm}}
\pgfpathclose
\pgfusepath{clip}
\begin{pgfscope}
\begin{pgfscope}
\pgfsetdash{}{0cm}
\pgfsetlinewidth{0.818mm}
\pgfsetroundcap
\pgfsetroundjoin
\pgfsetmiterlimit{7.0}
\definecolor{eps2pgf_color}{gray}{0}\pgfsetstrokecolor{eps2pgf_color}\pgfsetfillcolor{eps2pgf_color}
\pgfpathmoveto{\pgfqpoint{0.117cm}{1.815cm}}
\pgfpathlineto{\pgfqpoint{0.682cm}{1.065cm}}
\pgfpathlineto{\pgfqpoint{1.246cm}{1.815cm}}
\pgfusepath{stroke}
\end{pgfscope}
\definecolor{eps2pgf_color}{gray}{0}\pgfsetstrokecolor{eps2pgf_color}\pgfsetfillcolor{eps2pgf_color}
\pgfpathmoveto{\pgfqpoint{0.273cm}{1.789cm}}
\pgfpathcurveto{\pgfqpoint{0.273cm}{1.825cm}}{\pgfqpoint{0.259cm}{1.86cm}}{\pgfqpoint{0.233cm}{1.886cm}}
\pgfpathcurveto{\pgfqpoint{0.207cm}{1.912cm}}{\pgfqpoint{0.173cm}{1.926cm}}{\pgfqpoint{0.137cm}{1.926cm}}
\pgfpathcurveto{\pgfqpoint{0.1cm}{1.926cm}}{\pgfqpoint{0.066cm}{1.912cm}}{\pgfqpoint{0.04cm}{1.886cm}}
\pgfpathcurveto{\pgfqpoint{0.014cm}{1.86cm}}{\pgfqpoint{0cm}{1.825cm}}{\pgfqpoint{0cm}{1.789cm}}
\pgfpathcurveto{\pgfqpoint{0cm}{1.753cm}}{\pgfqpoint{0.014cm}{1.718cm}}{\pgfqpoint{0.04cm}{1.692cm}}
\pgfpathcurveto{\pgfqpoint{0.066cm}{1.667cm}}{\pgfqpoint{0.1cm}{1.652cm}}{\pgfqpoint{0.137cm}{1.652cm}}
\pgfpathcurveto{\pgfqpoint{0.173cm}{1.652cm}}{\pgfqpoint{0.207cm}{1.667cm}}{\pgfqpoint{0.233cm}{1.692cm}}
\pgfpathcurveto{\pgfqpoint{0.259cm}{1.718cm}}{\pgfqpoint{0.273cm}{1.753cm}}{\pgfqpoint{0.273cm}{1.789cm}}
\pgfusepath{fill}
\begin{pgfscope}
\pgfsetdash{}{0cm}
\pgfsetlinewidth{0.818mm}
\pgfsetmiterlimit{7.0}
\pgfpathmoveto{\pgfqpoint{0.682cm}{1.065cm}}
\pgfpathlineto{\pgfqpoint{0.679cm}{1.812cm}}
\pgfusepath{stroke}
\end{pgfscope}
\pgfpathmoveto{\pgfqpoint{0.815cm}{1.793cm}}
\pgfpathcurveto{\pgfqpoint{0.815cm}{1.829cm}}{\pgfqpoint{0.801cm}{1.864cm}}{\pgfqpoint{0.775cm}{1.89cm}}
\pgfpathcurveto{\pgfqpoint{0.75cm}{1.915cm}}{\pgfqpoint{0.715cm}{1.93cm}}{\pgfqpoint{0.679cm}{1.93cm}}
\pgfpathcurveto{\pgfqpoint{0.643cm}{1.93cm}}{\pgfqpoint{0.608cm}{1.915cm}}{\pgfqpoint{0.582cm}{1.89cm}}
\pgfpathcurveto{\pgfqpoint{0.557cm}{1.864cm}}{\pgfqpoint{0.542cm}{1.829cm}}{\pgfqpoint{0.542cm}{1.793cm}}
\pgfpathcurveto{\pgfqpoint{0.542cm}{1.756cm}}{\pgfqpoint{0.557cm}{1.722cm}}{\pgfqpoint{0.582cm}{1.696cm}}
\pgfpathcurveto{\pgfqpoint{0.608cm}{1.67cm}}{\pgfqpoint{0.643cm}{1.656cm}}{\pgfqpoint{0.679cm}{1.656cm}}
\pgfpathcurveto{\pgfqpoint{0.715cm}{1.656cm}}{\pgfqpoint{0.75cm}{1.67cm}}{\pgfqpoint{0.775cm}{1.696cm}}
\pgfpathcurveto{\pgfqpoint{0.801cm}{1.722cm}}{\pgfqpoint{0.815cm}{1.756cm}}{\pgfqpoint{0.815cm}{1.793cm}}
\pgfusepath{fill}
\pgfpathmoveto{\pgfqpoint{1.345cm}{1.765cm}}
\pgfpathcurveto{\pgfqpoint{1.345cm}{1.801cm}}{\pgfqpoint{1.331cm}{1.836cm}}{\pgfqpoint{1.305cm}{1.862cm}}
\pgfpathcurveto{\pgfqpoint{1.28cm}{1.887cm}}{\pgfqpoint{1.245cm}{1.902cm}}{\pgfqpoint{1.209cm}{1.902cm}}
\pgfpathcurveto{\pgfqpoint{1.172cm}{1.902cm}}{\pgfqpoint{1.138cm}{1.887cm}}{\pgfqpoint{1.112cm}{1.862cm}}
\pgfpathcurveto{\pgfqpoint{1.087cm}{1.836cm}}{\pgfqpoint{1.072cm}{1.801cm}}{\pgfqpoint{1.072cm}{1.765cm}}
\pgfpathcurveto{\pgfqpoint{1.072cm}{1.728cm}}{\pgfqpoint{1.087cm}{1.694cm}}{\pgfqpoint{1.112cm}{1.668cm}}
\pgfpathcurveto{\pgfqpoint{1.138cm}{1.642cm}}{\pgfqpoint{1.172cm}{1.628cm}}{\pgfqpoint{1.209cm}{1.628cm}}
\pgfpathcurveto{\pgfqpoint{1.245cm}{1.628cm}}{\pgfqpoint{1.28cm}{1.642cm}}{\pgfqpoint{1.305cm}{1.668cm}}
\pgfpathcurveto{\pgfqpoint{1.331cm}{1.694cm}}{\pgfqpoint{1.345cm}{1.728cm}}{\pgfqpoint{1.345cm}{1.765cm}}
\pgfusepath{fill}
\begin{pgfscope}
\pgfsetdash{}{0cm}
\pgfsetlinewidth{0.818mm}
\pgfsetroundcap
\pgfsetroundjoin
\pgfsetmiterlimit{7.0}
\pgfpathmoveto{\pgfqpoint{0.682cm}{1.065cm}}
\pgfpathlineto{\pgfqpoint{1.246cm}{0.315cm}}
\pgfpathlineto{\pgfqpoint{1.811cm}{1.065cm}}
\pgfusepath{stroke}
\end{pgfscope}
\pgfpathmoveto{\pgfqpoint{1.948cm}{1.065cm}}
\pgfpathcurveto{\pgfqpoint{1.948cm}{1.101cm}}{\pgfqpoint{1.933cm}{1.136cm}}{\pgfqpoint{1.907cm}{1.162cm}}
\pgfpathcurveto{\pgfqpoint{1.882cm}{1.187cm}}{\pgfqpoint{1.847cm}{1.202cm}}{\pgfqpoint{1.811cm}{1.202cm}}
\pgfpathcurveto{\pgfqpoint{1.775cm}{1.202cm}}{\pgfqpoint{1.74cm}{1.187cm}}{\pgfqpoint{1.714cm}{1.162cm}}
\pgfpathcurveto{\pgfqpoint{1.689cm}{1.136cm}}{\pgfqpoint{1.674cm}{1.101cm}}{\pgfqpoint{1.674cm}{1.065cm}}
\pgfpathcurveto{\pgfqpoint{1.674cm}{1.029cm}}{\pgfqpoint{1.689cm}{0.994cm}}{\pgfqpoint{1.714cm}{0.968cm}}
\pgfpathcurveto{\pgfqpoint{1.74cm}{0.942cm}}{\pgfqpoint{1.775cm}{0.928cm}}{\pgfqpoint{1.811cm}{0.928cm}}
\pgfpathcurveto{\pgfqpoint{1.847cm}{0.928cm}}{\pgfqpoint{1.882cm}{0.942cm}}{\pgfqpoint{1.907cm}{0.968cm}}
\pgfpathcurveto{\pgfqpoint{1.933cm}{0.994cm}}{\pgfqpoint{1.948cm}{1.029cm}}{\pgfqpoint{1.948cm}{1.065cm}}
\pgfusepath{fill}
\begin{pgfscope}
\pgfsetdash{}{0cm}
\pgfsetlinewidth{0.818mm}
\pgfsetmiterlimit{4.0}
\pgfpathmoveto{\pgfqpoint{1.383cm}{0.178cm}}
\pgfpathcurveto{\pgfqpoint{1.383cm}{0.214cm}}{\pgfqpoint{1.369cm}{0.249cm}}{\pgfqpoint{1.343cm}{0.275cm}}
\pgfpathcurveto{\pgfqpoint{1.317cm}{0.3cm}}{\pgfqpoint{1.283cm}{0.315cm}}{\pgfqpoint{1.246cm}{0.315cm}}
\pgfpathcurveto{\pgfqpoint{1.21cm}{0.315cm}}{\pgfqpoint{1.175cm}{0.3cm}}{\pgfqpoint{1.15cm}{0.275cm}}
\pgfpathcurveto{\pgfqpoint{1.124cm}{0.249cm}}{\pgfqpoint{1.11cm}{0.214cm}}{\pgfqpoint{1.11cm}{0.178cm}}
\pgfpathcurveto{\pgfqpoint{1.11cm}{0.141cm}}{\pgfqpoint{1.124cm}{0.107cm}}{\pgfqpoint{1.15cm}{0.081cm}}
\pgfpathcurveto{\pgfqpoint{1.175cm}{0.055cm}}{\pgfqpoint{1.21cm}{0.041cm}}{\pgfqpoint{1.246cm}{0.041cm}}
\pgfpathcurveto{\pgfqpoint{1.283cm}{0.041cm}}{\pgfqpoint{1.317cm}{0.055cm}}{\pgfqpoint{1.343cm}{0.081cm}}
\pgfpathcurveto{\pgfqpoint{1.369cm}{0.107cm}}{\pgfqpoint{1.383cm}{0.141cm}}{\pgfqpoint{1.383cm}{0.178cm}}
\pgfusepath{stroke}
\end{pgfscope}
\end{pgfscope}
\end{pgfscope}
\end{pgfscope}
\end{tikzpicture}}}}[1]{#1^{\!\resizebox{!}{.8em}{
\begin{tikzpicture}
\pgfpathmoveto{\pgfqpoint{0cm}{-0.035cm}}
\pgfpathlineto{\pgfqpoint{1.976cm}{-0.035cm}}
\pgfpathlineto{\pgfqpoint{1.976cm}{1.94cm}}
\pgfpathlineto{\pgfqpoint{0cm}{1.94cm}}
\pgfpathclose
\pgfusepath{clip}
\begin{pgfscope}
\begin{pgfscope}
\pgfpathmoveto{\pgfqpoint{0cm}{-0.035cm}}
\pgfpathlineto{\pgfqpoint{1.976cm}{-0.035cm}}
\pgfpathlineto{\pgfqpoint{1.976cm}{1.94cm}}
\pgfpathlineto{\pgfqpoint{0cm}{1.94cm}}
\pgfpathclose
\pgfusepath{clip}
\begin{pgfscope}
\begin{pgfscope}
\pgfsetdash{}{0cm}
\pgfsetlinewidth{0.818mm}
\pgfsetroundcap
\pgfsetroundjoin
\pgfsetmiterlimit{7.0}
\definecolor{eps2pgf_color}{gray}{0}\pgfsetstrokecolor{eps2pgf_color}\pgfsetfillcolor{eps2pgf_color}
\pgfpathmoveto{\pgfqpoint{0.117cm}{1.815cm}}
\pgfpathlineto{\pgfqpoint{0.682cm}{1.065cm}}
\pgfpathlineto{\pgfqpoint{1.246cm}{1.815cm}}
\pgfusepath{stroke}
\end{pgfscope}
\definecolor{eps2pgf_color}{gray}{0}\pgfsetstrokecolor{eps2pgf_color}\pgfsetfillcolor{eps2pgf_color}
\pgfpathmoveto{\pgfqpoint{0.273cm}{1.789cm}}
\pgfpathcurveto{\pgfqpoint{0.273cm}{1.825cm}}{\pgfqpoint{0.259cm}{1.86cm}}{\pgfqpoint{0.233cm}{1.886cm}}
\pgfpathcurveto{\pgfqpoint{0.207cm}{1.912cm}}{\pgfqpoint{0.173cm}{1.926cm}}{\pgfqpoint{0.137cm}{1.926cm}}
\pgfpathcurveto{\pgfqpoint{0.1cm}{1.926cm}}{\pgfqpoint{0.066cm}{1.912cm}}{\pgfqpoint{0.04cm}{1.886cm}}
\pgfpathcurveto{\pgfqpoint{0.014cm}{1.86cm}}{\pgfqpoint{0cm}{1.825cm}}{\pgfqpoint{0cm}{1.789cm}}
\pgfpathcurveto{\pgfqpoint{0cm}{1.753cm}}{\pgfqpoint{0.014cm}{1.718cm}}{\pgfqpoint{0.04cm}{1.692cm}}
\pgfpathcurveto{\pgfqpoint{0.066cm}{1.667cm}}{\pgfqpoint{0.1cm}{1.652cm}}{\pgfqpoint{0.137cm}{1.652cm}}
\pgfpathcurveto{\pgfqpoint{0.173cm}{1.652cm}}{\pgfqpoint{0.207cm}{1.667cm}}{\pgfqpoint{0.233cm}{1.692cm}}
\pgfpathcurveto{\pgfqpoint{0.259cm}{1.718cm}}{\pgfqpoint{0.273cm}{1.753cm}}{\pgfqpoint{0.273cm}{1.789cm}}
\pgfusepath{fill}
\begin{pgfscope}
\pgfsetdash{}{0cm}
\pgfsetlinewidth{0.818mm}
\pgfsetmiterlimit{7.0}
\pgfpathmoveto{\pgfqpoint{0.682cm}{1.065cm}}
\pgfpathlineto{\pgfqpoint{0.679cm}{1.812cm}}
\pgfusepath{stroke}
\end{pgfscope}
\pgfpathmoveto{\pgfqpoint{0.815cm}{1.793cm}}
\pgfpathcurveto{\pgfqpoint{0.815cm}{1.829cm}}{\pgfqpoint{0.801cm}{1.864cm}}{\pgfqpoint{0.775cm}{1.89cm}}
\pgfpathcurveto{\pgfqpoint{0.75cm}{1.915cm}}{\pgfqpoint{0.715cm}{1.93cm}}{\pgfqpoint{0.679cm}{1.93cm}}
\pgfpathcurveto{\pgfqpoint{0.643cm}{1.93cm}}{\pgfqpoint{0.608cm}{1.915cm}}{\pgfqpoint{0.582cm}{1.89cm}}
\pgfpathcurveto{\pgfqpoint{0.557cm}{1.864cm}}{\pgfqpoint{0.542cm}{1.829cm}}{\pgfqpoint{0.542cm}{1.793cm}}
\pgfpathcurveto{\pgfqpoint{0.542cm}{1.756cm}}{\pgfqpoint{0.557cm}{1.722cm}}{\pgfqpoint{0.582cm}{1.696cm}}
\pgfpathcurveto{\pgfqpoint{0.608cm}{1.67cm}}{\pgfqpoint{0.643cm}{1.656cm}}{\pgfqpoint{0.679cm}{1.656cm}}
\pgfpathcurveto{\pgfqpoint{0.715cm}{1.656cm}}{\pgfqpoint{0.75cm}{1.67cm}}{\pgfqpoint{0.775cm}{1.696cm}}
\pgfpathcurveto{\pgfqpoint{0.801cm}{1.722cm}}{\pgfqpoint{0.815cm}{1.756cm}}{\pgfqpoint{0.815cm}{1.793cm}}
\pgfusepath{fill}
\pgfpathmoveto{\pgfqpoint{1.345cm}{1.765cm}}
\pgfpathcurveto{\pgfqpoint{1.345cm}{1.801cm}}{\pgfqpoint{1.331cm}{1.836cm}}{\pgfqpoint{1.305cm}{1.862cm}}
\pgfpathcurveto{\pgfqpoint{1.28cm}{1.887cm}}{\pgfqpoint{1.245cm}{1.902cm}}{\pgfqpoint{1.209cm}{1.902cm}}
\pgfpathcurveto{\pgfqpoint{1.172cm}{1.902cm}}{\pgfqpoint{1.138cm}{1.887cm}}{\pgfqpoint{1.112cm}{1.862cm}}
\pgfpathcurveto{\pgfqpoint{1.087cm}{1.836cm}}{\pgfqpoint{1.072cm}{1.801cm}}{\pgfqpoint{1.072cm}{1.765cm}}
\pgfpathcurveto{\pgfqpoint{1.072cm}{1.728cm}}{\pgfqpoint{1.087cm}{1.694cm}}{\pgfqpoint{1.112cm}{1.668cm}}
\pgfpathcurveto{\pgfqpoint{1.138cm}{1.642cm}}{\pgfqpoint{1.172cm}{1.628cm}}{\pgfqpoint{1.209cm}{1.628cm}}
\pgfpathcurveto{\pgfqpoint{1.245cm}{1.628cm}}{\pgfqpoint{1.28cm}{1.642cm}}{\pgfqpoint{1.305cm}{1.668cm}}
\pgfpathcurveto{\pgfqpoint{1.331cm}{1.694cm}}{\pgfqpoint{1.345cm}{1.728cm}}{\pgfqpoint{1.345cm}{1.765cm}}
\pgfusepath{fill}
\begin{pgfscope}
\pgfsetdash{}{0cm}
\pgfsetlinewidth{0.818mm}
\pgfsetroundcap
\pgfsetroundjoin
\pgfsetmiterlimit{7.0}
\pgfpathmoveto{\pgfqpoint{0.682cm}{1.065cm}}
\pgfpathlineto{\pgfqpoint{1.246cm}{0.315cm}}
\pgfpathlineto{\pgfqpoint{1.811cm}{1.065cm}}
\pgfusepath{stroke}
\end{pgfscope}
\pgfpathmoveto{\pgfqpoint{1.948cm}{1.065cm}}
\pgfpathcurveto{\pgfqpoint{1.948cm}{1.101cm}}{\pgfqpoint{1.933cm}{1.136cm}}{\pgfqpoint{1.907cm}{1.162cm}}
\pgfpathcurveto{\pgfqpoint{1.882cm}{1.187cm}}{\pgfqpoint{1.847cm}{1.202cm}}{\pgfqpoint{1.811cm}{1.202cm}}
\pgfpathcurveto{\pgfqpoint{1.775cm}{1.202cm}}{\pgfqpoint{1.74cm}{1.187cm}}{\pgfqpoint{1.714cm}{1.162cm}}
\pgfpathcurveto{\pgfqpoint{1.689cm}{1.136cm}}{\pgfqpoint{1.674cm}{1.101cm}}{\pgfqpoint{1.674cm}{1.065cm}}
\pgfpathcurveto{\pgfqpoint{1.674cm}{1.029cm}}{\pgfqpoint{1.689cm}{0.994cm}}{\pgfqpoint{1.714cm}{0.968cm}}
\pgfpathcurveto{\pgfqpoint{1.74cm}{0.942cm}}{\pgfqpoint{1.775cm}{0.928cm}}{\pgfqpoint{1.811cm}{0.928cm}}
\pgfpathcurveto{\pgfqpoint{1.847cm}{0.928cm}}{\pgfqpoint{1.882cm}{0.942cm}}{\pgfqpoint{1.907cm}{0.968cm}}
\pgfpathcurveto{\pgfqpoint{1.933cm}{0.994cm}}{\pgfqpoint{1.948cm}{1.029cm}}{\pgfqpoint{1.948cm}{1.065cm}}
\pgfusepath{fill}
\begin{pgfscope}
\pgfsetdash{}{0cm}
\pgfsetlinewidth{0.818mm}
\pgfsetmiterlimit{4.0}
\pgfpathmoveto{\pgfqpoint{1.383cm}{0.178cm}}
\pgfpathcurveto{\pgfqpoint{1.383cm}{0.214cm}}{\pgfqpoint{1.369cm}{0.249cm}}{\pgfqpoint{1.343cm}{0.275cm}}
\pgfpathcurveto{\pgfqpoint{1.317cm}{0.3cm}}{\pgfqpoint{1.283cm}{0.315cm}}{\pgfqpoint{1.246cm}{0.315cm}}
\pgfpathcurveto{\pgfqpoint{1.21cm}{0.315cm}}{\pgfqpoint{1.175cm}{0.3cm}}{\pgfqpoint{1.15cm}{0.275cm}}
\pgfpathcurveto{\pgfqpoint{1.124cm}{0.249cm}}{\pgfqpoint{1.11cm}{0.214cm}}{\pgfqpoint{1.11cm}{0.178cm}}
\pgfpathcurveto{\pgfqpoint{1.11cm}{0.141cm}}{\pgfqpoint{1.124cm}{0.107cm}}{\pgfqpoint{1.15cm}{0.081cm}}
\pgfpathcurveto{\pgfqpoint{1.175cm}{0.055cm}}{\pgfqpoint{1.21cm}{0.041cm}}{\pgfqpoint{1.246cm}{0.041cm}}
\pgfpathcurveto{\pgfqpoint{1.283cm}{0.041cm}}{\pgfqpoint{1.317cm}{0.055cm}}{\pgfqpoint{1.343cm}{0.081cm}}
\pgfpathcurveto{\pgfqpoint{1.369cm}{0.107cm}}{\pgfqpoint{1.383cm}{0.141cm}}{\pgfqpoint{1.383cm}{0.178cm}}
\pgfusepath{stroke}
\end{pgfscope}
\end{pgfscope}
\end{pgfscope}
\end{pgfscope}
\end{tikzpicture}}}}
\newcommand{\CC}{\mathscr{C} \hspace{.1em}}
\newcommand{\LL}{\mathscr{L} \hspace{.2em}}
\newcommand{\Q}{\mathscr{Q} \hspace{.2em}}
\newcommand{\UU}{\mathscr{U}}
\begin{document}

\title{A PDE construction of \\ the Euclidean $\Phi^4_3$ quantum field theory}

\author[1]{Massimiliano Gubinelli}
\author[2]{Martina Hofmanov\'a}
\affil[1]{\small Hausdorff Center for Mathematics\\ \& Institute for Applied Mathematics\\ University of Bonn\\
Endenicher Allee 60\\
53115 Bonn, Germany.  \href{mailto:gubinelli@iam.uni-bonn.de}{gubinelli@iam.uni-bonn.de} }
%
\affil[2]{\small Fakult\"at f\"ur Mathematik, Universit\"at Bielefeld, Postfach 10 01 31, 33501 Bielefeld, Germany. \href{mailto:hofmanova@math.uni-bielefeld.de}{hofmanova@math.uni-bielefeld.de}}

\maketitle

\begin{abstract}
  We present a new construction of the Euclidean $\Phi^4$ quantum
  field theory on $\mathbb{R}^3$ based on PDE arguments. More precisely, we
  consider an approximation of the stochastic quantization equation on
  $\mathbb{R}^3$ defined on a periodic lattice of mesh size $\varepsilon$ and
  side length $M$. We introduce 
  a new renormalized energy method in weighted spaces and prove tightness of the corresponding Gibbs measures as
  $\varepsilon \rightarrow 0$, $M \rightarrow \infty$. Every limit point is non-Gaussian and satisfies reflection positivity, translation invariance and stretched exponential integrability. These properties allow to verify the Osterwalder--Schrader axioms for a  Euclidean QFT apart from rotation invariance and clustering. Our argument applies to arbitrary positive coupling constant, to multicomponent models with $O(N)$ symmetry and to some long-range variants. Moreover, we establish an integration by parts formula leading to the hierarchy of Dyson--Schwinger equations for the Euclidean correlation functions. To this end, we identify the renormalized cubic term as a \emph{distribution} on the space of Euclidean fields.
\end{abstract}

\tmkeywords{stochastic quantization, Euclidean quantum field theory,
paracontrolled calculus, integration by parts formulas, Dyson--Schwinger equations}

{\tableofcontents}

\section{Introduction}\label{sec:intro}

Let 
$\Lambda_{M, \varepsilon} = ( (\varepsilon\mathbb{Z})/( M\mathbb{Z}))^3$ be a   periodic lattice with
mesh size $\varepsilon$ and side length $M$ where $M/(2\varepsilon)\in\mathbb{N} .$ Consider the family $(\nu_{M, \varepsilon})_{M,\varepsilon}$ 
of Gibbs measures for the scalar field $\varphi:\Lambda_{M, \varepsilon}\to \mathbb{R}$, given by
\begin{equation}
 \mathd \nu_{M, \varepsilon} \propto \exp \left\{ - 2 \varepsilon^d  \sum_{x \in\Lambda_{M, \varepsilon}} \left[
  \frac{\lambda}{4} | \varphi_x |^4 + \frac{- 3 \lambda a_{M, \varepsilon} + 3
  \lambda^2 b_{M, \varepsilon} + m^2}{2} | \varphi_x |^2 + \frac{1}{2} |
  \nabla_{\varepsilon} \varphi_x |^2 \right] \right\}  \prod_{x \in \Lambda_{M,
  \varepsilon}}\!\! \mathd \varphi_x, \label{eq:gibbs}
\end{equation}
where $\nabla_{\varepsilon}$ denotes the discrete gradient and $a_{M,
\varepsilon}, b_{M, \varepsilon}$ are suitable renormalization constants, $m^2
\in \mathbb{R}$ is called the \tmtextit{mass} and $\lambda > 0$ the
\tmtextit{coupling constant}.  The numerical factor in the exponential is chosen in order to simplify the form of the  stochastic quantization equation \eqref{eq:P4} below. 
The main result of this paper is the following.
\begin{theorem}
  \label{th:main}There exists a choice of the sequence $(a_{M, \varepsilon},
  b_{M, \varepsilon})_{M, \varepsilon}$ such that for any $\lambda > 0$ and
  $m^2 \in \mathbb{R}$, the family of measures $(\nu_{M, \varepsilon})_{M,
  \varepsilon}$ appropriately extended to $\mathcal{S}' (\mathbb{R}^3)$ is tight.
  Every accumulation point $\nu$ is translation invariant, reflection positive
  and non-Gaussian. In addition, for every small $\kappa > 0$ there exists $\sigma >
  0$, $\beta > 0$ and $\upsilon = O (\kappa) > 0$ such that
  \begin{equation}
    \int_{\mathcal{S}' (\mathbb{R}^3)} \exp\{{\beta \| (1 + | \cdot |^2)^{-
    \sigma} \varphi \|_{H^{- 1 / 2 - \kappa}}^{1 - \upsilon}}\} \nu (\mathd
    \varphi) < \infty . \label{eq:exp-int-intro}
  \end{equation}
  Every $\nu$ satisfies an integration by parts formula which leads
  to the hierarchy of the~Dyson--Schwinger equations for $n$-point correlation
  functions.
\end{theorem}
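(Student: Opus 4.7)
The plan is to prove Theorem~\ref{th:main} through a stochastic quantization approach. For each pair $(M,\varepsilon)$ I would consider the parabolic SPDE
\begin{equation*}
(\partial_t - \Delta_\varepsilon + m^2)\varphi = -\lambda \varphi^3 + (3\lambda a_{M,\varepsilon} - 3\lambda^2 b_{M,\varepsilon})\varphi + \xi_{M,\varepsilon}
\end{equation*}
on $\mathbb{R}_+\times\Lambda_{M,\varepsilon}$ driven by a space-time white noise $\xi_{M,\varepsilon}$, whose invariant measure is precisely $\nu_{M,\varepsilon}$. The first step is to construct a stationary solution on the whole real line in time by standard arguments (these are finite-dimensional SDEs on the torus), so that its time-zero marginal is distributed as $\nu_{M,\varepsilon}$. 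The tightness problem for $(\nu_{M,\varepsilon})$ then reduces to obtaining \emph{uniform-in-$(M,\varepsilon)$} a priori estimates for the stationary solutions, in function spaces that account for the loss of compactness as $M\to\infty$, namely polynomially weighted Besov--H\"older spaces on $\mathbb{R}^3$.

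The heart of the argument is a renormalized energy estimate. Following the paracontrolled or regularity-structures philosophy, I would Da Prato--Debussche decompose $\varphi = \Phi_1 - \lambda \Phi_3 + \psi$, where $\Phi_1$ is the stationary Ornstein--Uhlenbeck process and $\Phi_3$ is built from an explicit polynomial in the free stochastic objects (the Wick powers $\tone{\Phi},\ttwo{\Phi},\tthree{\Phi}$, and higher iterated trees), with the renormalization constants $a_{M,\varepsilon},b_{M,\varepsilon}$ chosen so that these stochastic data converge in the relevant weighted spaces. The remainder $\psi$ then satisfies a PDE whose nonlinearity, once expanded, contains the coercive cubic term $-\lambda\psi^3$ together with products of $\psi$ with stochastic data of prescribed regularity. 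Testing against a suitable power of $\psi$ in weighted norms and exploiting the damping provided by $-\lambda\psi^3$ should yield an estimate of the form $\mathbb{E}\exp\{\beta\|\varphi\|_\star^{1-\upsilon}\}<\infty$ uniformly, for an appropriate weighted norm $\|\cdot\|_\star$ at regularity $-1/2-\kappa$. This uniform stretched-exponential control yields both tightness and the integrability bound \eqref{eq:exp-int-intro}, passing to the limit by lower semicontinuity. The hard part here will be the careful book-keeping of weights and the compatibility of the cubic coercivity with the singular stochastic products on the unbounded domain; achieving a \emph{stretched} (not merely polynomial) exponential tail without wasting coercivity is the main technical obstacle.

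For the qualitative properties of accumulation points: translation invariance passes to the limit from the discrete invariance of $\nu_{M,\varepsilon}$ under the action of $\varepsilon\mathbb{Z}^3/M\mathbb{Z}^3$; reflection positivity follows from the lattice nearest-neighbour form of the quadratic part in \eqref{eq:gibbs}, which makes $\nu_{M,\varepsilon}$ reflection positive along coordinate hyperplanes, and this property is preserved under weak limits. Non-Gaussianity I would prove by contradiction: if some $\nu$ were Gaussian, the uniform bounds on moments of $:\!\varphi^4\!:$ and the Dyson--Schwinger identity derived in the last step would force an algebraic relation incompatible with $\lambda>0$. Finally, for the integration by parts formula I would differentiate along cylinder test functions using the invariance under the stochastic quantization dynamics: formally $\int(L F)\,\mathrm{d}\nu=0$ for any test functional $F$ and the corresponding generator $L$, which after reorganization gives the IBP relation containing the renormalized cubic $:\!\varphi^3\!:$. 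The delicate point is to make sense of $:\!\varphi^3\!:$ as a genuine random distribution under $\nu$ rather than merely along the dynamics; this is done by showing that the stochastic-quantization identification $:\!\varphi^3\!:\,=\lim_\varepsilon(\varphi_\varepsilon^3-3a_{M,\varepsilon}\varphi_\varepsilon+3\lambda b_{M,\varepsilon}\varphi_\varepsilon)$ converges in a weighted negative regularity space under the accumulation measure, uniformly enough to iterate and produce the full Dyson--Schwinger hierarchy for the $n$-point functions.
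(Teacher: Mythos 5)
Your overall blueprint --- stationary stochastic quantization dynamics, uniform weighted Besov estimates from cubic damping, reflection positivity from the lattice, translation invariance from lattice shifts, IBP from the dynamics --- does match the paper's, but there is a genuine gap at the heart of the energy estimate, and a couple of minor mismatches.

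You propose the classical Da Prato--Debussche decomposition $\varphi = \Phi_1 - \lambda\Phi_3 + \psi$ with $\Phi_3$ an explicit polynomial in the free stochastic trees, and then an energy estimate for the remainder. The paper explicitly identifies why this does not close. Writing $\varphi_{M,\varepsilon} = X_{M,\varepsilon} - \lambda\,\mathscr{L}_\varepsilon^{-1}\llbracket X^3_{M,\varepsilon}\rrbracket + \zeta_{M,\varepsilon}$, the worst inhomogeneity in the equation for $\zeta$ is the paraproduct $\llbracket X^2\rrbracket \succ \big(\mathscr{L}_\varepsilon^{-1}\llbracket X^3\rrbracket\big)$, which sits at regularity $-1-\kappa$; since $\zeta$ can only be expected at regularity $1-\kappa$, the pairing $\langle\rho^4\zeta,\llbracket X^2\rrbracket\succ(\cdot)\rangle$ cannot be bounded by the coercive terms. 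Adding more trees improves the source by half a degree per order but never fixes the paraproduct term, because its regularity is dictated by $\llbracket X^2\rrbracket$ alone. The paper's fix requires three non-tree ingredients that your proposal does not supply: (i) an auxiliary variable $Y_{M,\varepsilon}$ solving the linear \emph{paracontrolled fixed-point equation}
\[
  \mathscr{L}_\varepsilon Y_{M,\varepsilon} = -\llbracket X^3_{M,\varepsilon}\rrbracket - 3\lambda\big(\mathscr{U}^\varepsilon_> \llbracket X^2_{M,\varepsilon}\rrbracket\big)\succ Y_{M,\varepsilon},
\]
so that the offending paraproduct is subtracted together with $\llbracket X^3\rrbracket$; (ii) the approximate-duality commutator of Lemma~\ref{lem:dual1}, showing $\langle\rho^4\phi,\llbracket X^2\rrbracket\circ\phi\rangle_\varepsilon$ equals $\langle\rho^4\phi,\llbracket X^2\rrbracket\succ\phi\rangle_\varepsilon$ up to a bounded remainder $D_{\rho^4,\varepsilon}$, which is the algebraic observation that lets the resonant term be replaced by a paraproduct in the energy balance; and (iii) the paracontrolled change of variable $\psi_\varepsilon := \phi_\varepsilon + \mathscr{Q}_\varepsilon^{-1}\big[3\lambda\llbracket X^2_\varepsilon\rrbracket\succ\phi_\varepsilon\big]$ that completes the square and absorbs the divergent paraproduct into $m^2\|\rho^2\psi_\varepsilon\|_{L^2}^2 + \|\rho^2\nabla_\varepsilon\psi_\varepsilon\|_{L^2}^2$. ``Testing against a suitable power of $\psi$'' does not produce these cancellations.

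Two smaller gaps. Your non-Gaussianity argument --- a contradiction extracted from the Dyson--Schwinger hierarchy --- is vague and circular: the hierarchy relates moments but by itself does not preclude a Gaussian solution without extra input. The paper instead computes the connected four-point function against Littlewood--Paley projectors at scale $2^i$, shows the contribution of the tree $X\otimes X\otimes X\otimes\mathscr{L}^{-1}\llbracket X^3\rrbracket$ scales like $2^i$ while all remainders are $O(2^{i(1/2+5\kappa)})$ using the moment bounds on $\zeta$, hence the 4-point function is strictly negative for $i$ large and any fixed $\lambda>0$. Finally, for the IBP formula: to give meaning to $\llbracket\varphi^3\rrbracket$ uniformly in $(M,\varepsilon)$ one needs a further paracontrolled ansatz $\chi_{M,\varepsilon} = \phi_{M,\varepsilon} + 3\lambda(\mathscr{L}_\varepsilon^{-1}\llbracket X^2\rrbracket)\succ\phi_{M,\varepsilon}$ together with a \emph{uniform time-regularity} bound (Theorem~\ref{thm:phitight}), which allows the term $\llbracket X^3\rrbracket$ --- defined only as a space-time distribution --- to be integrated against $h(t)F(\varphi(t))$; mere iteration of convergence in negative spaces will not suffice here.
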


For the precise definition of translation invariance and reflection positivity (RP) we refer the reader to Section~\ref{s:ax}.

\medskip

The proof of convergence of the family $(\nu_{M, \varepsilon})_{M,  \varepsilon}$ 
has been one of the major achievements of the constructive quantum field theory (CQFT)
program~{\cite{velo_constructive_1973,simon_po2_1974,MR887102,rivasseau_perturbative_1991,baez_introduction_1992,jaffe_constructive_2000,MR2391806,summers_perspective_2012}}
which flourished in the 70s and 80s. 
In the two dimensional setting the existence of an
analogous object has been one of the early successes of CQFT, while in four and more dimensions (after a proper normalization) any accumulation point is necessarily Gaussian~{\cite{fernandez_random_1992}}.

The existence of an Euclidean invariant and reflection positive limit $\nu$ (plus some technical conditions) 
implies the existence of a
relativistic quantum field theory in the Minkowski space-time $\mathbb{R}^{1
+ 2}$ which satisfies the Wightman axioms~{\cite{MR0436800}}. This is a minimal set of
axioms capturing the essence of the combination of quantum mechanics and
special relativity. The translation from the commutative probabilistic setting
(Euclidean QFT) to the non-commutative Minkowski QFT setting is operated by a
set of axioms introduced by
Osterwalder--Schrader (OS)~{\cite{osterwalder_axioms_1973,osterwalder_axioms_1975}}
for the correlation functions of the measure $\nu$. These  are called Schwinger
functions or Euclidean correlation functions and shall satisfy: a
regularity axiom, a Euclidean invariance axiom, a reflection
positivity axiom, a symmetry axiom and a cluster property. \

Euclidean invariance and reflection positivity conspire against each other. 
Models which easily satisfy one property hardly satisfy the other  if they
are not Gaussian, or simple transformations thereof, see e.g.~{\cite{albeverio_HC1_2002,albeverio_hida_2009}}. Reflection
positivity itself is a property whose crucial importance for probability theory and
mathematical physics~{\cite{kotecky_reflection_2009,jaffe_reflection_2018}}
and representation theory~{\cite{neeb_reflection_2018,jorgensen_reflection_2018}} has been one of
the byproducts of the constructive effort.

\medskip

The original proof of the OS axioms, along with additional properties of the
limiting measures which are called $\Phi^4_3$
measures, is scattered in a series of works covering almost a decade.
Glimm~{\cite{glimm_boson_1968}} first proved the existence of the Hamiltonian
(with an infrared regularization) in the Minkowski setting. Then Glimm and
Jaffe~{\cite{glimm_positivity_1973}} introduced the \tmtextit{phase cell
expansion} of the regularized Schwinger functions, which revealed itself a
powerful and robust tool (albeit complex to digest) in order to handle the local
 singularities of Euclidean quantum fields and  to prove the ultraviolet stability
in finite volume (i.e. the limit $\varepsilon\to 0$ with $M$ fixed). The proof of existence of the infinite volume limit ($M \to \infty$) and the
verification of Osterwalder--Schrader
axioms
was then
completed, for $\lambda$
small and using cluster expansion methods, independently by Feldman and Osterwalder~{\cite{feldman_wightman_1976}} and by Magnen and
S\'en\'eor~{\cite{magnen_infinite_1976}}. Finally
the work of Seiler and Simon~{\cite{seiler_nelsons_1976}} allowed to extend
the existence result to all $\lambda > 0$ (this is claimed in~{\cite{MR887102}} even though we could not find a clear statement in Seiler and Simon's paper). Equations of
motion for the quantum fields were established by Feldman and
R{\c{a}}czka~{\cite{feldman77}}.

\medskip

Since this first, complete, construction, there have been several other
attempts to simplify (both technically and conceptually) the arguments and
the $\Phi^4_3$ measure has been since considered a test bed for various CQFT
techniques. There exists at least six methods of  proof: the original \emph{phase
cell method} of Glimm and Jaffe extended by Feldman and
Osterwalder~{\cite{feldman_wightman_1976}}, Magnen and
S\'en\'eor~{\cite{magnen_infinite_1976}}  and Park~{\cite{park_convergence_1977}}
(among others), the probabilistic approach of Benfatto, Cassandro, Gallavotti,
Nicol{\'o}, Olivieri, Presutti and
Schiacciatelli~{\cite{benfatto_probabilistic_1978}}, the \emph{block average method}
of Ba{\l}aban~{\cite{MR733476}} revisited by Dimock
in~{\cite{dimock_renormalization_2013_1,dimock_renormalization_2013_2,dimock_renormalization_2014_3}},
the wavelet method of Battle--Federbush~{\cite{battle_wavelets_1999}}, the
\tmtextit{skeleton inequalities method} of Brydges, Fr{\"o}hlich,
Sokal~{\cite{MR723546}}, the work of Watanabe on rotation invariance~\cite{watanabe_block_1989} via the renormalization group method of Gaw\k{e}dzki and Kupiainen~\cite{gawpolhk_edzki_asymptotic_1986}, and more recently the {renormalization group method}
 of Brydges, Dimock and Hurd~{\cite{brydges_short_1995}}.

It should be said that, apart from the Glimm--Jaffe--Feldman--Osterwalder--Magnen--S\'en\'eor
result, none of the additional constructions seems to be as complete and to
verify explicitly all the OS axioms. As Jaffe~{\cite{MR2391806}} remarks:

\begin{quotation}
  ``Not only should one give a transparent proof of the dimension $d = 3$ 
  construction, but as explained to me by Gelfand [private communication], one
  should make it sufficiently attractive that probabilists will take
  cognizance of the existence of a wonderful mathematical object.''
\end{quotation}

\medskip
The proof of Theorem~\ref{th:main} uses tools from the PDE theory as
well as recent advances in the field of \tmtextit{singular SPDEs}, without using any input 
 from traditional CQFT.
It applies to all  values of the coupling parameter $\lambda>0$ as well as to natural
extensions to  $N$-dimensional vectorial and long-range variants of the model. 
 
 Our methods are  very different from all the known constructions we
enumerated above. In particular, we do not rely on any of the standard tools
like cluster expansion or correlation inequalities or skeleton inequalities, and therefore our approach
brings a new perspective to this extensively investigated classical problem,
with respect to the removal of both ultraviolet and infrared regularizations.

Showing  invariance under translation, reflection positivity, the regularity
axiom of Osterwalder and Schrader  and the non-Gaussianity of the measure, we go a long
way (albeit not fully reaching the goal) to a complete independent construction of the $\Phi^4_3$ quantum field theory.
Furthermore, the integration by parts formula that we are able to establish leads to the hierarchy of the
Dyson--Schwinger equations for the Schwinger functions of the measure.

The key idea is to use a dynamical description of the approximate measure which
relies on an additional random source term which is Gaussian, in the spirit of the
\tmtextit{stochastic quantization} approach introduced by
Nelson~{\cite{nelson1966,MR0214150}} and Parisi and
Wu~{\cite{parisi_perturbation_1981}} (with a precursor in a technical report
of Symanzik~{\cite{Symanzik1964}}).

The concept of \emph{stochastic quantization} refers to the introduction of a reversible
stochastic dynamics which has the target measure as the invariant measure, here in
particular the $\Phi^4_d$ measure in $d$ dimensions. The rigorous study of
the stochastic quantization for the two dimensional version of the $\Phi^4$ theory
has been first initiated by Jona-Lasinio and
Mitter~{\cite{jona-lasinio_stochastic_1985}} in finite volume and by Borkar,
Chari and Mitter~{\cite{borkar_stochastic_1988}} in infinite volume. A natural
$d = 2$ local dynamics has been subsequently constructed by Albeverio and
R{\"o}ckner~{\cite{albeverio_stochastic_1991}} using Dirichlet forms in
infinite dimensions. Later on, Da Prato and
Debussche~{\cite{da_prato_strong_2003}} have shown for the first time the
existence of strong solutions to the stochastic dynamics in finite volume.
Da Prato and Debussche have introduced an innovative use of a mixture of probabilistic and
PDE techniques and constitute a landmark in the development of PDE techniques
to study stochastic analysis problems.  Similar methods have been used by McKean~\cite{mckean_1995,mckean_1995_err} and Bourgain~\cite{bourgain_invariant_1996} in the context of random data deterministic PDEs. Mourrat and
Weber~{\cite{MW17}} have subsequently shown the existence and
uniqueness of the stochastic dynamics globally in space and time. For the $d =
1$ dimensional variant, which is substantially simpler and does not require
renormalization, global existence and uniqueness have been established by
Iwata~{\cite{iwata_infinite_1987}}.

In the three dimensional setting the progress has been significantly slower due to the more
severe nature of the singularities of  solutions to the stochastic quantization
equation. Only very recently, there has been substantial progress due to the
invention of \tmtextit{regularity structures theory} by
Hairer~{\cite{hairer_theory_2014}} and \tmtextit{paracontrolled distributions}
by Gubinelli, Imkeller, Perkowski~{\cite{GIP}}. These  theories greatly
extend the pathwise approach of Da Prato and Debussche via insights coming
from Lyons' \tmtextit{rough path
theory}~{\cite{lyons_differential_1998,lyons_system_2002,lyons_differential_2007}}
and in particular the concept of \tmtextit{controlled
paths}~{\cite{gubinelli_controlling_2004,friz_course_2014}}. With these new ideas
it became possible to solve certain analytically ill-posed stochastic PDEs, including the
stochastic quantization equation for the $\Phi_3^4$ measure and the
Kardar--Parisi--Zhang equation. The first results were limited to finite
volume: local-in-time well-posedness has been established by
Hairer~{\cite{hairer_theory_2014}} and Catellier, Chouk~{\cite{CC}}.
Kupiainen~{\cite{kupiainen_renormalization_2016}} introduced a method based on the
renormalization group ideas of~\cite{gawpolhk_edzki_asymptotic_1986}. Long-time behavior has been studied by
Mourrat, Weber~{\cite{MWcomedown}}, Hairer,
Mattingly~{\cite{Hairer:2018:10.1214/17-AIHP840}} and a lattice approximation
in finite volume has been given by Hairer and Matetski~\cite{hairer_discretisations_2018} and by Zhu and Zhu~{\cite{ZZ18}}. 
Global in
space and time solutions have been first constructed by Gubinelli and
Hofmanov{\'a} in~{\cite{GH18}}. Local bounds on solutions, independent on
boundary conditions, and stretched exponential integrability have been
recently proven by Moinat and Weber~{\cite{moinat_space_time_2018}}.

However, all these advances  still fell short of giving a complete proof of the
existence of the $\Phi^4_3$ measure on the full space and of its properties.
Indeed they, including essentially all of the two dimensional
results, are principally aimed at studying the dynamics with an \tmtextit{a~priori} knowledge
of the existence and the properties of the invariant measure. For example Hairer and Matetski~\cite{hairer_discretisations_2018} use a discretization of a finite periodic domain to prove that the limiting dynamics leaves the finite volume $\Phi^4_3$ measure  invariant \emph{using} the a priori knowledge of its convergence from the paper of Brydges et al.~\cite{MR723546}. Studying the dynamics, especially globally in space and time   is still
a very complex problem which has siblings in the ever growing literature on
invariant measures for \tmtextit{deterministic} PDEs starting with the work of
Lebowitz, Rose and
Speer~{\cite{lebowitz_statistical_1988,lebowitz_statistical_1989}},
Bourgain~{\cite{bourgain_periodic_1994,bourgain_invariant_1996}}, Burq and
Tzvetkov~{\cite{burq_random_2008,burq_random_2008_1,tzvetkov_random_2016}} and with many following works (see e.g.~\cite{colliander_almost_2012, chatterjee_probabilistic_2012, nahmod_almost_2013, chatterjee_invariant_2014, benyi_probabilistic_2015}) which we cannot exhaustively review here.

The first work proposing a \tmtextit{constructive} use of the dynamics is, to our knowledge, the work
of Albeverio and Kusuoka~{\cite{albeverio_invariant_2017}}, who proved tightness of certain approximations
in a finite volume.  Inspired by this result, our aim here is to show how these
recent ideas connecting probability with PDE theory can be streamlined and
extended to recover a complete and independent proof of
existence of the $\Phi^{4}_{3}$ measure on the full space. In the same spirit see also the work of
Hairer and Iberti~{\cite{hairer_tightness_2018}} on the tightness of the 2d
Ising--Kac model.

\medskip

Soon after Hairer's seminal paper~{\cite{hairer_theory_2014}},
Jaffe~\cite{MR3392505} analyzed the stochastic quantization from the
point of view of reflection positivity and constructive QFT and concluded that
one has to necessarily take the infinite time limit to satisfy RP. Even with
global solution at hand a proof of RP from dynamics seems nontrivial and
actually the only robust tool we are aware of to prove RP is to start from
finite volume lattice Gibbs measures for which RP can be established by elementary arguments.

Taking into account these considerations, our aim is to use an equilibrium dynamics
to derive bounds which are strong enough to prove the 
tightness of the family $(\nu_{M, \varepsilon})_{M, \varepsilon}$. To be more
precise, we study a lattice approximation of the (renormalized) stochastic
quantization equation
\begin{equation}
  (\partial_t + m^2 - \Delta) \varphi + \lambda \varphi^3 - \infty \varphi =
  \xi, \qquad (t, x) \in \mathbb{R}_+ \times \mathbb{R}^3, \label{eq:P4}
\end{equation}
where $\xi$ is a space-time white noise on $\mathbb{R}^3$. The lattice
dynamics is a system of stochastic differential equation which is globally
well-posed and has $\nu_{M, \varepsilon}$ as its unique invariant measure. We
can therefore consider its stationary solution $\varphi_{M, \varepsilon}$
having at each time the law $\nu_{M, \varepsilon}$. We introduce a suitable
decomposition together with an energy method in the framework of weighted
Besov spaces. This allows us, on the one hand, to track down and renormalize
the short scale singularities present in the model as $\varepsilon \rightarrow
0$, and on the other hand, to control the growth of the solutions as $M \to
\infty$. As a result we obtain uniform bounds  which allow us to pass to the
limit in the weak topology of probability measures.

The details of the renormalized energy method rely on recent developments in
the analysis of singular PDEs. In order to make the paper accessible to a wide audience with some PDE background 
 we  implement
 renormalization using the paracontrolled calculus of~{\cite{GIP}} which is based on Bony's paradifferential operators~\cite{bony_calcul_1981, meyer_remarques_1981, BCD}.
 We  also rely on some tools from the paracontrolled analysis in weigthed Besov spaces which we developed in~{\cite{GH18}} and on the results of Martin and Perkowski~{\cite{MP17}} on Besov spaces on the lattice.

%

\begin{remark} Let us comment in detail on  specific aspects of our proof.
  \begin{enumerate}
  
   \item The method we use here  differs from the approach
of~{\cite{GH18}} in that we are initially less concerned with the continuum
dynamics itself. We do not try to obtain estimates for strong solutions and
rely instead on certain cancellations in the energy estimate that permit to
significantly simplify the proof. The resulting bounds are sufficient to
provide a rather clear picture of any limit measure as well as some of its physical
properties. In contrast, in {\cite{GH18}} we provided a
detailed control of the dynamics {\eqref{eq:P4}} (in stationary or
non-stationary situations) at the price of a more involved analysis.
Section~\ref{s:estim} of the present paper could in principle be replaced by
the corresponding analysis of {\cite{GH18}}. However the adaptation of that
analysis to the lattice setting (without which we do not know how to prove RP) would anyway require the further preparatory work which constitutes a large fraction of the present paper. Similarly, the recent results of Moinat and Weber~{\cite{moinat_space_time_2018}} (which appeared after we completed a first version of this paper) can be conceivably used to replace a part of Section~\ref{sec:tight}.

    \item The stretched exponential integrability
    in~{\eqref{eq:exp-int-intro}} is also discussed in the work of Moinat and
    Weber~{\cite{moinat_space_time_2018}} (using different norms) and it is
    sufficient to prove the original regularity axiom of Osterwalder and Schrader but not its formulation given in the book of Glimm and Jaffe~\cite{MR887102}.
    
    \item The Dyson--Schwinger equations were  first derived by Feldman
    and R{\c{a}}czka~{\cite{feldman77}} using the results of Glimm,
    Jaffe, Feldman and Osterwalder.
    
    \item As already noted by Albeverio, Liang and Zegarlinski
    {\cite{albeverio_remark_2006}} on the formal level, the integration by
    parts formula gives rise to a cubic term which cannot be interpreted as a
    random variable under the $\Phi^4_3$ measure. Therefore, the crucial
    question that remained unsolved until now is how to make sense of this
    critical term as a well-defined probabilistic object. In the present
    paper, we obtain fine estimates on the approximate stochastic quantization
    equation and construct a coupling of the stationary solution to the
    continuum $\Phi^4_3$ dynamics and the Gaussian free field. This leads to a
    detailed description of the renormalized cubic term as a genuine random
    space-time distribution. Moreover, we approximate this term in the spirit
    of the operator product expansion.
    
    \item To the best of our knowledge, our work provides the first rigorous
    proof of a general integration by parts formula with an exact formula for
    the renormalized cubic term. In addition, the method applies to arbitrary
    values of the coupling constant $\lambda \geqslant 0$ if $m^2 > 0$ and
    $\lambda > 0$ if $m^2 \leqslant 0$  and we state the precise dependence of our
    estimates on $\lambda$. In particular, we show that our energy bounds are
    uniform over $\lambda$ in every bounded subset of $[0, \infty)$ provided $m^2 >0$ (see Remark~\ref{rem:neg-mass}). Let us recall  that for some $m^{2}=m_{c}^{2}(\lambda)$ 
     the physical mass of the continuum theory is zero and it is said that the model is critical. Existence of such a critical point was shown in \cite[Section 9, Part (4)]{MR723546}. We note that this case is included in our construction, even though we are not able to locate it since we do not have control over correlations. Its large scale limit  is conjectured to correspond to the Ising conformal field theory, recently  actively studied in \cite{MR3942977} using the conformal bootstrap approach.
    
%
    
    \item By essentially the same arguments, we are able to treat the vector
    version of the model, where the scalar field $\varphi : \mathbb{R}^3
    \rightarrow \mathbb{R}$ is replaced by a vector valued one $\varphi :
    \mathbb{R}^3 \rightarrow \mathbb{R}^N$ for some $N \in \mathbb{N}$ and the
    measures $\nu_{M, \varepsilon}$ are given by a similar expression
    as~{\eqref{eq:gibbs}}, where the norm $| \varphi |$ is understood as the
    Euclidean norm in $\mathbb{R}^N$.

    \item Our proof also readily extends to the \emph{fractional} variant of $\Phi^4_3$ where
     the base Gaussian measure is obtained from the fractional Laplacian $(-\Delta)^\gamma$ with $\gamma\in(21/22,1)$ (see Section~\ref{sec:fractional} for details). In general this model is sub-critical for $\gamma\in(3/4,1)$ and in the mass-less case 
     it has recently attracted some interest since  it is \emph{bootstrappable}~\cite{poland_conformal_2019, behan_bootstrapping_2019}. 
     
  \end{enumerate}
\end{remark}

To conclude this introductory part, let us compare our result with other
constructions of the $\Phi^4_3$ field theory. The most straightforward and
simplest available proof has been given by Brydges, Fr{\"o}hlich and
Sokal~{\cite{MR723546}} using skeleton  and correlation inequalities.
All the other methods we cited above employ technically involved machineries
and various kinds of expansions (they are however able to obtain very strong
information about the model in the weakly-coupled regime, i.e. when $\lambda$ is small). Compared to the
existing methods, ours bears similarity in conceptual simplicity to that
of~{\cite{MR723546}}, with some advantages and some disadvantages. Both works
construct the continuum $\Phi^4_3$ theory as a subsequence limit of lattice
theories and the rotational invariance remains unproven. The main difference
is that~{\cite{MR723546}} relies on correlation inequalities. On the
one hand, this restricts the applicability to weak couplings and only models with
$N = (0,) 1, 2$ components (note that the $N=0$ models have a meaning only in their formalism but not in ours). But, on the other hand, it allows to establish bounds
on the decay of correlation functions, which we do not have. However, our results hold for every
value of $\lambda > 0$ and $m^2 \in \mathbb{R}$ while the results
in~{\cite{MR723546}} work only in the so-called ``single phase region'',
which  corresponds to  $m^2>m_{c}^{2}(\lambda)$.

\medskip
Our work is intended as a first step in the direction of using PDE methods in the study of
Euclidean QFTs and large scale properties of statistical mechanical models. Another related attempt is the variational approach developed in~\cite{barashkov_variational_2018} for the finite volume $\Phi^4_3$ measure.
As far as the present paper is concerned the main open problem is to establish rotational invariance and to give more information on the limiting measures, in particular to establish uniqueness for small $\lambda$. It is not clear how to deduce anything about correlations from the dynamics but it seems to be a very interesting and challenging problem.

\paragraph{Plan.} The paper is organized as follows. Section~\ref{s:not} gives a summary of
notation used throughout the paper, Section~\ref{s:strat} presents the main ideas of our strategy and Section~\ref{sec:tight}, Section~\ref{s:ax} and
Section~\ref{s:sd} are devoted to the main results. First, in Section~\ref{sec:tight}
we construct the Euclidean quantum field theory as a limit of the approximate
Gibbs measures $\nu_{M, \varepsilon}$. To this end, we introduce the lattice
dynamics together with its decomposition. The main energy estimate is
established in Theorem~\ref{th:energy-estimate} and consequently the desired
tightness as well as moment bounds are proven in Theorem~\ref{thm:main}.  In
Section~\ref{s:exp} we establish finite stretched exponential moments. Consequently, in
Section~\ref{s:ax} we verify the translation invariance and reflection positivity, the regularity axiom
and non-Gaussianity of any limit measure. Section~\ref{s:sd} is devoted to the
integration by parts formula and the Dyson--Schwinger equations. In Section~\ref{sec:fractional} we discuss the extension of 
our results to a long-range version of the $\Phi^4_3$ model.
Finally, in Appendix~\ref{s:app} we collect a number of technical results
needed in the main body of the paper.

\paragraph{Acknowledgement.} The authors would like to thank the Isaac Newton
Institute for Mathematical Sciences for support and hospitality during the
programme Scaling limits, rough paths, quantum field theory when work on this
paper was undertaken. In particular, we are grateful to Adelmalek Abdesselam, Sergio Albeverio,
David Brydges, J{\"u}rg Fr{\"o}hlich, Stefan Hollands, Seiichiro Kusuoka and
Pronob Mitter for stimulating discussions. We are also deeply grateful to the anonymous referees for their impressively detailed comments on the CQFT literature and on the relations between various set of axioms for Euclidean correlation functions which helped us to  substantially improve this paper and our knowledge.

This work was supported by EPSRC
Grant Number EP/R014604/1. M. G. is partially supported by the German Research
Foundation (DFG) via CRC 1060.

\section{Notation}

\label{s:not}Within this paper we are concerned with the $\Phi^4_3$ model in
discrete as well as continuous setting. In particular, we denote by
$\Lambda_{\varepsilon} = (\varepsilon \mathbb{Z})^d$ for $\varepsilon = 2^{-
N}$, $N \in \mathbb{N}_0$, the rescaled lattice $\mathbb{Z}^d$ and by
$\Lambda_{M, \varepsilon} = \varepsilon \mathbb{Z}^d \cap \mathbb{T}^d_M =
\varepsilon \mathbb{Z}^d \cap \left[ - \frac{M}{2}, \frac{M}{2} \right)^d$ its
periodic counterpart of size $M > 0$ such that  $M/(2\varepsilon)\in\mathbb{N}$. For notational simplicity, we use the
convention that the case $\varepsilon = 0$ always refers to the continuous
setting. For instance, we denote by $\Lambda_0$ the full space $\Lambda_0
=\mathbb{R}^d$ and by $\Lambda_{M, 0}$ the continuous torus $\Lambda_{M, 0}
=\mathbb{T}^d_M$. With the slight abuse of notation, the parameter
$\varepsilon$ is always taken either of the form $\varepsilon = 2^{- N}$ for
some $N \in \mathbb{N}_0$, $N \geqslant N_0$, for certain $N_0 \in
\mathbb{N}_0$ that will be chosen as a consequence of Lemma~\ref{lem:equiv}
below, or $\varepsilon = 0$. Various proofs below will be formulated generally
for $\varepsilon \in \mathcal{A} \assign \{ 0, 2^{- N} ; N \in \mathbb{N}_0, N
\geqslant N_0 \}$ and it is understood that the case $\varepsilon = 0$ or
alternatively $N = \infty$ refers to the continuous setting. All the proportionality constants, unless explicitly signalled, will be independent of $M,\varepsilon,\lambda,m^2$. We will track the explicit dependence on $\lambda$ as far as possible and signal when the constant depends on the value of $m^2>0$. 

\medskip

For $f \in \ell^1
(\Lambda_{\varepsilon})$ and $g \in L^1 (\hat{\Lambda}_{\varepsilon})$, respectively, we define the Fourier and the inverse Fourier transform as
\[ \mathcal{F} f (k) = \varepsilon^d \sum_{x \in \Lambda_{\varepsilon}} f (x)
   e^{- 2 \pi i k \cdummy x}, \qquad \mathcal{F}^{- 1} g (x) = \int_{(\varepsilon^{- 1} \mathbb{T})^d} g (k)
   e^{2 \pi i k \cdummy x} \mathd k, \]
   where $k \in (\varepsilon^{- 1} \mathbb{T})^d
   \backassign \hat{\Lambda}_{\varepsilon}$ and $x \in \Lambda_{\varepsilon}$.
These definitions can be extended to discrete Schwartz distributions in a
natural way, we refer to {\cite{MP17}} for more details. In general, we do not
specify on which lattice the Fourier transform is taken as it will be clear
from the context.

Consider a smooth dyadic partition of unity $(\varphi_j)_{j \geqslant - 1}$
such that $\varphi_{- 1}$ is supported in a ball around $0$ of radius
$\frac{1}{2}$, $\varphi_0$ is supported in an annulus, $\varphi_j (\cdummy) =
\varphi_0 (2^{- j} \cdummy)$ for $j \geqslant 0$ and if $| i - j | > 1$ then
$\tmop{supp} \varphi_i \cap \tmop{supp} \varphi_j = \emptyset$. For the
definition of Besov spaces on the lattice $\Lambda_{\varepsilon}$ for
$\varepsilon = 2^{- N}$, we introduce a suitable periodic partition of unity
on $\hat{\Lambda}_{\varepsilon}$ as follows
\begin{equation}
  \varphi^{\varepsilon}_j (k) \assign \left\{ \begin{array}{lll}
    \varphi_j (k), &  & j < N - J,\\
    1 - \sum_{j < N - J} \varphi_j (k), &  & j = N - J,
  \end{array} \right. \label{eq:p1}
\end{equation}
where $k \in \hat{\Lambda}_{\varepsilon}$ and the parameter $J \in
\mathbb{N}_0$, whose precise value will be chosen below independently on
$\varepsilon \in \mathcal{A}$, satisfies $0 \leqslant N - J \leqslant
J_{\varepsilon} \assign \inf \{ j : \tmop{supp} \varphi_j \not\subseteq
[-\varepsilon^{- 1}/2,\varepsilon^{- 1}/2 )^d \} \rightarrow \infty$ as
$\varepsilon \rightarrow 0$. We note that by construction there exists $\ell
\in \mathbb{Z}$ independent of $\varepsilon = 2^{- N}$ such that
$J_{\varepsilon} = N - \ell$.

Then {\eqref{eq:p1}} yields a periodic partition of unity on
$\hat{\Lambda}_{\varepsilon}$. The reason for choosing the upper index as $N -
J$ and not the maximal choice $J_{\varepsilon}$ will become clear in Lemma
\ref{lem:equiv} below, where it allows us to define suitable localization
operators needed for our analysis. The choices of parameters $N_0$ and $J$ are
related in the following way: A given partition of unity $(\varphi_j)_{j
\geqslant - 1}$ determines the parameters $J_{\varepsilon}$ in the form
$J_{\varepsilon} = N - \ell$ for some $\ell \in \mathbb{Z}$. By the condition
$N - J \leqslant J_{\varepsilon}$ we obtain the first lower bound on $J$. Then
Lemma~\ref{lem:equiv} yields a (possibly larger) value of $J$ which is fixed
throughout the paper. Finally, the condition $0 \leqslant N - J$ implies
the necessary lower bound $N_0$ for $N$, or alternatively the upper bound for
$\varepsilon = 2^{- N} \leqslant 2^{- N_0}$ and defines the set $\mathcal{A}$.
We stress that once the parameters $J, N_0$ are chosen, they remain fixed
throughout the paper.

Remark that according to our convention, $(\varphi^0_j)_{j \geqslant - 1}$
denotes the original partition of unity $(\varphi_j)_{j \geqslant - 1}$ on
$\mathbb{R}^d$, which can be also read from {\eqref{eq:p1}} using the
fact that for $\varepsilon = 0$ we have $J_{\varepsilon} = \infty$.

Now we may define the Littlewood--Paley blocks for distributions on
$\Lambda_{\varepsilon}$ by
\[ \Delta_j^{\varepsilon} f \assign \mathcal{F}^{- 1} (\varphi_j^{\varepsilon}
   \mathcal{F} f), \]
which leads us to the definition of weighted Besov spaces. Throughout the paper, $\rho$ denotes a polynomial weight 
of the form  \begin{equation}\label{eq:weight}
\rho (x) =
\langle h x \rangle^{- \nu} = (1 + |h  x |^2)^{- \nu / 2}
\end{equation}
 for some $\nu \geqslant
0$ and $h>0$. The constant $h$ will be fixed below in Lemma \ref{lemma:bounds-rhs1} in order to produce a small bound for certain terms.
 Such weights satisfy the admissibility condition $\rho(x)/\rho(y)\lesssim \rho^{-1}(x-y)$
for all $ x, y
   \in \mathbb{R}^d . $
For $\alpha \in \mathbb{R}$, $p, q \in [1, \infty]$ and $\varepsilon \in [0,
1]$ we define the weighted Besov spaces on $\Lambda_{\varepsilon}$ by the norm
\[ \| f \|_{B^{\alpha, \varepsilon}_{p, q} (\rho)} = \Bigg( \sum_{- 1
   \leqslant j \leqslant N - J} 2^{\alpha j q} \| \Delta_j^{\varepsilon} f
   \|_{L^{p, \varepsilon} (\rho)}^q \Bigg)^{1 / q} = \Bigg( \sum_{- 1
   \leqslant j \leqslant N - J} 2^{\alpha j q} \| \rho \Delta_j^{\varepsilon}
   f \|_{L^{p, \varepsilon}}^q \Bigg)^{1 / q}, \]
where $L^{p, \varepsilon}$ for $\varepsilon \in \mathcal{A} \setminus \{ 0 \}$
stands for the $L^p$ space on $\Lambda_{\varepsilon}$ given by the norm
\[ \| f \|_{L^{p, \varepsilon}} = \Bigg( \varepsilon^d \sum_{x \in
   \Lambda_{\varepsilon}} | f (x) |^p \Bigg)^{1 / p} \]
(with the usual modification if $p = \infty$). Analogously, we may define the weighted Besov spaces for explosive polynomial weights of the form $\rho^{-1}$. Note that if $\varepsilon = 0$
then $B^{\alpha, \varepsilon}_{p, q} (\rho)$ is the classical weighted Besov
space $B^{\alpha}_{p, q} (\rho)$. In the sequel, we also employ the following
notations
\[ \CC^{\alpha, \varepsilon} (\rho) \assign B^{\alpha, \varepsilon}_{\infty,
   \infty} (\rho), \qquad H^{\alpha, \varepsilon} (\rho) \assign B^{\alpha,
   \varepsilon}_{2, 2} (\rho) . \]
In Lemma~\ref{lem:equiv2}  we show that one can pull the weight
inside the Littlewood--Paley blocks in the definition of the weighted Besov
spaces. Namely, under suitable assumptions on the weight that are satisfied by
polynomial weights we have
$ \| f \|_{B^{\alpha, \varepsilon}_{p, q} (\rho)} \sim \| \rho f
   \|_{B^{\alpha, \varepsilon}_{p, q}} $
in the sense of equivalence of norms, uniformly in $\varepsilon$.   We define the duality product on
$\Lambda_{\varepsilon}$ by
\[ \langle f, g \rangle_{\varepsilon} \assign \varepsilon^d \sum_{x \in
   \Lambda_{\varepsilon}} f (x) g (x)  \]
and Lemma~\ref{lem:dual2} shows that $B^{- \alpha, \varepsilon}_{p', q'} (\rho^{- 1})$ is
included in the topological dual of $B^{\alpha, \varepsilon}_{p, q} (\rho)$
for conjugate exponents $p, p'$ and $q, q'$.

\medskip

We employ the tools from paracontrolled calculus as
introduced in {\cite{GIP}}, the reader is also referred to {\cite{BCD}} for
further details. We shall  freely use the decomposition $f g = f \prec g +
f \circ g + f \succ g$, where $f \succ g = g \succ f$ and $f \circ g$,
respectively, stands for the paraproduct of $f$ and $g$ and the corresponding
resonant term, defined in terms of Littlewood--Paley decomposition. More
precisely, for $f, g \in \mathcal{S}' (\Lambda_{\varepsilon})$ we let
\[ f \prec g \assign \sum_{1 \leqslant i, j \leqslant N - J, i < j - 1}
   \Delta^{\varepsilon}_i f \Delta^{\varepsilon}_j g, \qquad f \circ g \assign
   \sum_{1 \leqslant i, j \leqslant N - J, i \sim j} \Delta^{\varepsilon}_i f
   \Delta^{\varepsilon}_j g. \]
   We also employ the notations $f\preccurlyeq g:= f\prec g+f\circ g$ and $f\Join g:=f\prec g+f\succ g$.
For notational simplicity, we do not stress the dependence of the paraproduct
and the resonant term on $\varepsilon$ in the sequel. These paraproducts
satisfy the usual estimates uniformly in $\varepsilon$, see e.g.
{\cite{MP17}}, Lemma~4.2, which can be naturally extended to general
$B^{\alpha, \varepsilon}_{p, q} (\rho)$ Besov spaces as in {\cite{MW17}},
Theorem~3.17.

\medskip

Throughout the paper we assume that $m^{2}>0$ and we only discuss in Remark \ref{rem:neg-mass} how to treat the case of $m^{2}\leqslant0$. In addition,  we are only concerned with the 3 dimensional setting and let $d = 3$. We denote by $\Delta_{\varepsilon}$ the discrete Laplacian on $\Lambda_{\varepsilon}$
given by
\[ \Delta_{\varepsilon} f (x) = \varepsilon^{- 2} \sum_{i = 1}^d (f (x +
   \varepsilon e_i) - 2 f (x) + f (x - \varepsilon e_i)), \qquad x \in
   \Lambda_{\varepsilon}, \]
where $(e_i)_{i = 1, \ldots, d}$ is the canonical basis of $\mathbb{R}^d$.  It
can be checked by a direct computation that the integration by parts formula
\[ \langle \Delta_{\varepsilon} f, g \rangle_{M, \varepsilon} = - \langle
   \nabla_{\varepsilon} f, \nabla_{\varepsilon} g \rangle_{M, \varepsilon} = -
   \varepsilon^d \sum_{x \in \Lambda_{M, \varepsilon}} \sum_{i = 1}^d \frac{f
   (x + \varepsilon e_i) - f (x)}{\varepsilon}  \frac{g (x + \varepsilon e_i)
   - g (x)}{\varepsilon} \]
holds for the discrete gradient
\[ \nabla_{\varepsilon} f (x) = \left( \frac{f (x + \varepsilon e_i) - f
   (x)}{\varepsilon} \right)_{i = 1, \ldots, d} . \]
We let $\Q_{\varepsilon} \assign m^{2} - \Delta_{\varepsilon}$,  $\LL_{\varepsilon} \assign \partial_t +
\Q_{\varepsilon}$ and we write $\LL$ for the continuum analogue of $\LL_{\varepsilon}$.  We let $\LL_{\varepsilon}^{- 1}$ to be the inverse of
$\LL_{\varepsilon}$ on $\Lambda_{\varepsilon}$ such that $\LL_{\varepsilon}^{-
1} f = v$ is a solution to $\LL_{\varepsilon} v = f$, $v (0) = 0.$

\section{Overview of the strategy}
\label{s:strat}

With the goals and notations being set, let us now outline the main steps of our strategy.

\paragraph{Lattice dynamics.}

For fixed parameters $\varepsilon \in \mathcal{A}, M > 0$, we consider a stationary
solution $\varphi_{M, \varepsilon}$ to the discrete stochastic quantization
equation
\begin{equation}
  \mathscr{L} \hspace{.2em}_{\varepsilon} \varphi_{M, \varepsilon} + \lambda
  \varphi_{M, \varepsilon}^3 + (- 3 \lambda a_{M, \varepsilon} + 3
  \lambda^2 b_{M, \varepsilon}) \varphi_{M, \varepsilon} = \xi_{M,
  \varepsilon}, \qquad x \in \Lambda_{M, \varepsilon}, \label{eq:moll}
\end{equation}
whose law at every time $t \geqslant 0$ is given by the Gibbs measure
{\eqref{eq:gibbs}}.
Here $\xi_{M, \varepsilon}$ is a discrete approximation of
a space-time white noise $\xi$ on $\mathbb{R}^{d}$ constructed as follows: Let $\xi_M$ denote its periodization on $\mathbb{T}^d_M$
given by
\[ \xi_M (h) \assign \xi (h_M), \qquad \tmop{where} \quad h_M (t, x) \assign
   \tmmathbf{1}_{\left[ - \frac{M}{2}, \frac{M}{2} \right)^d} (x) \sum_{y \in
   M\mathbb{Z}^d} h (t, x + y), \]
where $h\in L^{2}(\mathbb R\times\mathbb R^{d})$ is a test function, and define the corresponding spatial discretization by
\[ \xi_{M, \varepsilon} (t, x) \assign \varepsilon^{- d} \langle \xi_M (t,
   \cdummy), \tmmathbf{1}_{| \cdummy - x | \leqslant \varepsilon / 2} \rangle,
   \qquad (t, x) \in \mathbb{R} \times \Lambda_{M, \varepsilon} . \]
Then {\eqref{eq:moll}} is a finite-dimensional SDE in a gradient form and it has a (unique) invariant measure $\nu_{M, \varepsilon}$ given by~\eqref{eq:gibbs}. Indeed, the global existence of solutions can be proved along the lines of Khasminskii nonexplosion test \cite[Theorem 3.5]{khasminskii2011stochastic} whereas invariance of the measure~\eqref{eq:gibbs} follows from \cite[Theorem 2]{Zab89}.

Recall that due to the irregularity of the
space-time white noise in dimension $3$, a solution to the limit problem
{\eqref{eq:P4}} can only exist as a distribution. Consequently, since products
of distributions are generally not well-defined it is necessary to make sense
of the cubic term. This forces us to introduce a mass renormalization via
constants $a_{M, \varepsilon}, b_{M, \varepsilon} \geqslant 0$ in
{\eqref{eq:moll}} which shall be suitably chosen in order to compensate the
ultraviolet divergencies. In other words, the additional linear term shall
introduce the correct counterterms needed to renormalize the cubic power and
to derive estimates uniform in both parameters $M, \varepsilon$. To this end,
$a_{M, \varepsilon}$ shall diverge linearly whereas $b_{M, \varepsilon}$
logarithmically and these are of course the same divergencies as those
appearing in the other approaches, see e.g. Chapter 23 in {\cite{MR887102}}.

\paragraph{Energy method in a nutshell.}Our aim is to apply the so-called
energy method, which is one of the very basic approaches in the PDE theory. It
relies on testing the equation by the solution itself and estimating all the
terms. To explain the main idea, consider a toy model
\[ \LL u + \lambda u^3 = f, \qquad x \in \mathbb{R}^3,
\]
driven by a sufficiently regular forcing $f$ such that  the solution is
smooth and there are no difficulties in defining the cube. Testing the
equation by $u$ and integrating the Laplace term by parts leads to
\[ \frac{1}{2} \partial_t \| u \|_{L^2}^2 + m^2 \| u \|_{L^2}^2 + \| \nabla u
   \|_{L^2}^2 + \lambda \| u \|_{L^4}^4 = \langle f, u \rangle . \]
Now, there are several possibilities to estimate the right hand side using
duality and Young's inequality, namely,
\[ \langle f, u \rangle \leqslant \left\{ \begin{array}{l}
     \| f \|_{L^2} \| u \|_{L^2} \leqslant C_{ m^2} \| f \|_{L^2}^2 +
     \frac12 m^2 \| u \|_{L^2}^2\\
     \| f \|_{L^{4 / 3}} \| u \|_{L^4} \leqslant C \lambda^{- 1 / 3}
     \| f \|_{L^{4 / 3}}^{4 / 3} + \frac12 \lambda \| u \|_{L^4}^4\\
     \| f \|_{H^{- 1}} \| u \|_{H^1} \leqslant C_{m^2} \| f \|_{H^{-
     1}}^2 + \frac12 (m^2 \| u \|_{L^2}^2 + \| \nabla u \|_{L^2}^2)
   \end{array} . \right. \]
This way, the dependence on $u$ on the right hand side can be absorbed into
the good terms on the left hand side. If in
addition $u$ was stationary hence in particular $t \mapsto \mathbb{E} \| u (t)
\|_{L^2}^2$ is constant, then we obtain
\[ m^2 \mathbb{E} \| u (t) \|_{L^2}^2 +\mathbb{E} \| \nabla u (t) \|_{L^2}^2 +
   \lambda \mathbb{E} \| u (t) \|_{L^4}^4 \leqslant \left\{ \begin{array}{l}
     C_{m^2} \| f \|_{L^2}^2\\
     C \lambda^{- 1 / 3} \| f \|_{L^{4 / 3}}^{4 / 3}\\
     C_{m^2} \| f \|_{H^{- 1}}^2
   \end{array} . \right. \]

To summarize, using the dynamics we are able to obtain moment bounds for the
invariant measure that depend only on the forcing $f$. Moreover, we also see
the behavior of the estimates with respect to the coupling constant $\lambda$.
Nevertheless, even though using the $L^4$-norm of $u$ introduces a blow up for
$\lambda \rightarrow 0$, the right hand side $f$ in our energy estimate below
will always contain certain power of $\lambda$ in order to cancel this blow up
and to obtain bounds that are uniform as $\lambda \rightarrow 0$.

\paragraph{Decomposition and estimates.} Since the forcing $\xi$ on the right
hand side of {\eqref{eq:P4}} does not possess sufficient regularity, the
energy method cannot be applied directly. Following the usual approach within
the field of singular SPDEs, we shall find a suitable decomposition of the
solution $\varphi_{M, \varepsilon}$, isolating parts of different regularity.
In particular, since the equation is subcritical in the sense of Hairer
{\cite{hairer_theory_2014}} (or superrenormalizable in the language of quantum
field theory), we expect the nonlinear equation {\eqref{eq:P4}} to be a
perturbation of the linear problem $ \LL X = \xi .$
This singles out the most irregular part of the limit field $\varphi$. Hence on
the approximate level we set $\varphi_{M, \varepsilon} = X_{M, \varepsilon} +
\eta_{M, \varepsilon}$ where $X_{M, \varepsilon}$ is a stationary solution to
\begin{equation}
 \LL_{\varepsilon} X_{M,\varepsilon} = \xi_{M,\varepsilon} , \label{eq:X}
\end{equation}
and the remainder
$\eta_{M, \varepsilon}$ is expected to be more regular.

To see if it is indeed the case we plug our decomposition into
{\eqref{eq:moll}} to obtain
\begin{equation}
 \LL_{\varepsilon} \eta_{M, \varepsilon} + 3
  \lambda^2 b_{M, \varepsilon} \varphi_{M, \varepsilon} + \lambda \llbracket
  X_{M, \varepsilon}^3 \rrbracket + \lambda 3 \eta_{M, \varepsilon} \llbracket
  X_{M, \varepsilon}^2 \rrbracket + \lambda 3 \eta_{M, \varepsilon}^2 X_{M,
  \varepsilon} + \lambda \eta_{M, \varepsilon}^3 = 0. \label{eq:eta}
\end{equation}
Here $\llbracket X^2_{M, \varepsilon} \rrbracket$ and $\llbracket X^3_{M,
\varepsilon} \rrbracket$ denote the second and third Wick power of the
Gaussian random variable $X_{M, \varepsilon}$ defined by
\begin{equation}\label{eq:X2X3}
\llbracket X^2_{M, \varepsilon} \rrbracket \assign X^2_{M, \varepsilon} -
   a_{M, \varepsilon}, \qquad \llbracket X^3_{M, \varepsilon} \rrbracket
   \assign X^3_{M, \varepsilon} - 3 a_{M, \varepsilon} X_{M, \varepsilon},
   \end{equation}
where $a_{M, \varepsilon} \assign \mathbb{E} [X^2_{M, \varepsilon} (t)]$ is
independent of $t$ due to stationarity. It can be shown by direct computations
that appeared already in a number of works (see {\cite{CC}},
{\cite{hairer_theory_2014}}, {\cite{hairer_regularity_2015}},
{\cite{mourrat_construction_2016}}) that $\llbracket X^2_{M, \varepsilon}
\rrbracket$ is bounded
uniformly in $M, \varepsilon$ as a continuous stochastic process with values
in the weighted Besov space $\mathscr{C} \hspace{.1em}^{- 1 - \kappa,\varepsilon}
(\rho^{\sigma})$ for every $\kappa, \sigma > 0$, whereas  $\llbracket X^3_{M, \varepsilon} \rrbracket$ can only be constructed as a space-time distribution. In addition,
they converge to the Wick power $\llbracket X^2 \rrbracket$ and $\llbracket
X^3 \rrbracket$ of $X$. In other words, the
linearly growing renormalization constant $a_{M, \varepsilon}$ gives
counterterms needed for the Wick ordering.

Note that $X$ is a continuous
stochastic process with values in $\mathscr{C} \hspace{.1em}^{- 1 / 2 -
\kappa} (\rho^{\sigma})$ for every $\kappa, \sigma > 0$. This limits the
regularity that can be obtained for the approximations $X_{M, \varepsilon}$
uniformly in $M, \varepsilon$. Hence the most irregular term in
{\eqref{eq:eta}} is the third Wick power and by Schauder estimates we expect
$\eta_{M, \varepsilon}$ to be 2 degrees of regularity better. Namely, we
expect uniform bounds for $\eta_{M, \varepsilon}$ in $\mathscr{C}
\hspace{.1em}^{1 / 2 - \kappa} (\rho^{\sigma})$ which indeed verifies our
presumption that $\eta_{M, \varepsilon}$ is more regular than $\varphi_{M,
\varepsilon}$. However, the above decomposition introduced new products in
{\eqref{eq:eta}} that are not well-defined under the above discussed uniform
bounds. In particular, both $\eta_{M, \varepsilon} \llbracket X_{M,
\varepsilon}^2 \rrbracket$ and $\eta_{M, \varepsilon}^2 X_{M, \varepsilon}$ do
not meet the condition that the sum of their regularities is strictly
positive, which is a convenient sufficient  condition  for a product of two distributions
to be analytically well-defined.

In order  to continue the decomposition in the same spirit and to cancel the most irregular term in {\eqref{eq:eta}}, namely,
$\llbracket X^3_{M, \varepsilon} \rrbracket$. The usual way, which can be found
basically in all the available works on the stochastic quantization (see e.g.
in \ {\cite{CC}}, {\cite{GH18}}, {\cite{hairer_theory_2014}},
{\cite{hairer_regularity_2015}}, {\cite{MWcomedown}}) is therefore to define
$X_{M, \varepsilon}^{\!\resizebox{0.6em}{!}{
\begin{tikzpicture}
\pgfpathmoveto{\pgfqpoint{0cm}{-0.035cm}}
\pgfpathlineto{\pgfqpoint{1.376cm}{-0.035cm}}
\pgfpathlineto{\pgfqpoint{1.376cm}{1.552cm}}
\pgfpathlineto{\pgfqpoint{0cm}{1.552cm}}
\pgfpathclose
\pgfusepath{clip}
\begin{pgfscope}
\begin{pgfscope}
\pgfpathmoveto{\pgfqpoint{0cm}{-0.035cm}}
\pgfpathlineto{\pgfqpoint{1.376cm}{-0.035cm}}
\pgfpathlineto{\pgfqpoint{1.376cm}{1.552cm}}
\pgfpathlineto{\pgfqpoint{0cm}{1.552cm}}
\pgfpathclose
\pgfusepath{clip}
\begin{pgfscope}
\begin{pgfscope}
\pgfsetdash{}{0cm}
\pgfsetlinewidth{0.818mm}
\pgfsetroundcap
\pgfsetroundjoin
\pgfsetmiterlimit{7.0}
\definecolor{eps2pgf_color}{gray}{0}\pgfsetstrokecolor{eps2pgf_color}\pgfsetfillcolor{eps2pgf_color}
\pgfpathmoveto{\pgfqpoint{0.117cm}{1.421cm}}
\pgfpathlineto{\pgfqpoint{0.682cm}{0.671cm}}
\pgfpathlineto{\pgfqpoint{1.246cm}{1.421cm}}
\pgfusepath{stroke}
\end{pgfscope}
\definecolor{eps2pgf_color}{gray}{0}\pgfsetstrokecolor{eps2pgf_color}\pgfsetfillcolor{eps2pgf_color}
\pgfpathmoveto{\pgfqpoint{0.273cm}{1.395cm}}
\pgfpathcurveto{\pgfqpoint{0.273cm}{1.432cm}}{\pgfqpoint{0.259cm}{1.467cm}}{\pgfqpoint{0.233cm}{1.492cm}}
\pgfpathcurveto{\pgfqpoint{0.207cm}{1.518cm}}{\pgfqpoint{0.173cm}{1.532cm}}{\pgfqpoint{0.137cm}{1.532cm}}
\pgfpathcurveto{\pgfqpoint{0.1cm}{1.532cm}}{\pgfqpoint{0.066cm}{1.518cm}}{\pgfqpoint{0.04cm}{1.492cm}}
\pgfpathcurveto{\pgfqpoint{0.014cm}{1.467cm}}{\pgfqpoint{0cm}{1.432cm}}{\pgfqpoint{0cm}{1.395cm}}
\pgfpathcurveto{\pgfqpoint{0cm}{1.359cm}}{\pgfqpoint{0.014cm}{1.324cm}}{\pgfqpoint{0.04cm}{1.299cm}}
\pgfpathcurveto{\pgfqpoint{0.066cm}{1.273cm}}{\pgfqpoint{0.1cm}{1.258cm}}{\pgfqpoint{0.137cm}{1.258cm}}
\pgfpathcurveto{\pgfqpoint{0.173cm}{1.258cm}}{\pgfqpoint{0.207cm}{1.273cm}}{\pgfqpoint{0.233cm}{1.299cm}}
\pgfpathcurveto{\pgfqpoint{0.259cm}{1.324cm}}{\pgfqpoint{0.273cm}{1.359cm}}{\pgfqpoint{0.273cm}{1.395cm}}
\pgfusepath{fill}
\begin{pgfscope}
\pgfsetdash{}{0cm}
\pgfsetlinewidth{0.818mm}
\pgfsetmiterlimit{7.0}
\pgfpathmoveto{\pgfqpoint{0.682cm}{0.671cm}}
\pgfpathlineto{\pgfqpoint{0.679cm}{1.418cm}}
\pgfusepath{stroke}
\end{pgfscope}
\pgfpathmoveto{\pgfqpoint{0.815cm}{1.399cm}}
\pgfpathcurveto{\pgfqpoint{0.815cm}{1.435cm}}{\pgfqpoint{0.801cm}{1.47cm}}{\pgfqpoint{0.775cm}{1.496cm}}
\pgfpathcurveto{\pgfqpoint{0.75cm}{1.521cm}}{\pgfqpoint{0.715cm}{1.536cm}}{\pgfqpoint{0.679cm}{1.536cm}}
\pgfpathcurveto{\pgfqpoint{0.643cm}{1.536cm}}{\pgfqpoint{0.608cm}{1.521cm}}{\pgfqpoint{0.582cm}{1.496cm}}
\pgfpathcurveto{\pgfqpoint{0.557cm}{1.47cm}}{\pgfqpoint{0.542cm}{1.435cm}}{\pgfqpoint{0.542cm}{1.399cm}}
\pgfpathcurveto{\pgfqpoint{0.542cm}{1.363cm}}{\pgfqpoint{0.557cm}{1.328cm}}{\pgfqpoint{0.582cm}{1.302cm}}
\pgfpathcurveto{\pgfqpoint{0.608cm}{1.276cm}}{\pgfqpoint{0.643cm}{1.262cm}}{\pgfqpoint{0.679cm}{1.262cm}}
\pgfpathcurveto{\pgfqpoint{0.715cm}{1.262cm}}{\pgfqpoint{0.75cm}{1.276cm}}{\pgfqpoint{0.775cm}{1.302cm}}
\pgfpathcurveto{\pgfqpoint{0.801cm}{1.328cm}}{\pgfqpoint{0.815cm}{1.363cm}}{\pgfqpoint{0.815cm}{1.399cm}}
\pgfusepath{fill}
\pgfpathmoveto{\pgfqpoint{1.345cm}{1.371cm}}
\pgfpathcurveto{\pgfqpoint{1.345cm}{1.408cm}}{\pgfqpoint{1.331cm}{1.442cm}}{\pgfqpoint{1.305cm}{1.468cm}}
\pgfpathcurveto{\pgfqpoint{1.28cm}{1.494cm}}{\pgfqpoint{1.245cm}{1.508cm}}{\pgfqpoint{1.209cm}{1.508cm}}
\pgfpathcurveto{\pgfqpoint{1.172cm}{1.508cm}}{\pgfqpoint{1.138cm}{1.494cm}}{\pgfqpoint{1.112cm}{1.468cm}}
\pgfpathcurveto{\pgfqpoint{1.087cm}{1.442cm}}{\pgfqpoint{1.072cm}{1.408cm}}{\pgfqpoint{1.072cm}{1.371cm}}
\pgfpathcurveto{\pgfqpoint{1.072cm}{1.335cm}}{\pgfqpoint{1.087cm}{1.3cm}}{\pgfqpoint{1.112cm}{1.274cm}}
\pgfpathcurveto{\pgfqpoint{1.138cm}{1.249cm}}{\pgfqpoint{1.172cm}{1.234cm}}{\pgfqpoint{1.209cm}{1.234cm}}
\pgfpathcurveto{\pgfqpoint{1.245cm}{1.234cm}}{\pgfqpoint{1.28cm}{1.249cm}}{\pgfqpoint{1.305cm}{1.274cm}}
\pgfpathcurveto{\pgfqpoint{1.331cm}{1.3cm}}{\pgfqpoint{1.345cm}{1.335cm}}{\pgfqpoint{1.345cm}{1.371cm}}
\pgfusepath{fill}
\begin{pgfscope}
\pgfsetdash{}{0cm}
\pgfsetlinewidth{0.818mm}
\pgfsetroundcap
\pgfsetmiterlimit{4.0}
\pgfpathmoveto{\pgfqpoint{0.682cm}{0.671cm}}
\pgfpathlineto{\pgfqpoint{0.682cm}{0.042cm}}
\pgfusepath{stroke}
\end{pgfscope}
\end{pgfscope}
\end{pgfscope}
\end{pgfscope}
\end{tikzpicture}}}$ as the stationary solution to
\begin{equation}\label{eq:Xt31}
 \LL_{\varepsilon} X^{\!\resizebox{0.6em}{!}{
\begin{tikzpicture}
\pgfpathmoveto{\pgfqpoint{0cm}{-0.035cm}}
\pgfpathlineto{\pgfqpoint{1.376cm}{-0.035cm}}
\pgfpathlineto{\pgfqpoint{1.376cm}{1.552cm}}
\pgfpathlineto{\pgfqpoint{0cm}{1.552cm}}
\pgfpathclose
\pgfusepath{clip}
\begin{pgfscope}
\begin{pgfscope}
\pgfpathmoveto{\pgfqpoint{0cm}{-0.035cm}}
\pgfpathlineto{\pgfqpoint{1.376cm}{-0.035cm}}
\pgfpathlineto{\pgfqpoint{1.376cm}{1.552cm}}
\pgfpathlineto{\pgfqpoint{0cm}{1.552cm}}
\pgfpathclose
\pgfusepath{clip}
\begin{pgfscope}
\begin{pgfscope}
\pgfsetdash{}{0cm}
\pgfsetlinewidth{0.818mm}
\pgfsetroundcap
\pgfsetroundjoin
\pgfsetmiterlimit{7.0}
\definecolor{eps2pgf_color}{gray}{0}\pgfsetstrokecolor{eps2pgf_color}\pgfsetfillcolor{eps2pgf_color}
\pgfpathmoveto{\pgfqpoint{0.117cm}{1.421cm}}
\pgfpathlineto{\pgfqpoint{0.682cm}{0.671cm}}
\pgfpathlineto{\pgfqpoint{1.246cm}{1.421cm}}
\pgfusepath{stroke}
\end{pgfscope}
\definecolor{eps2pgf_color}{gray}{0}\pgfsetstrokecolor{eps2pgf_color}\pgfsetfillcolor{eps2pgf_color}
\pgfpathmoveto{\pgfqpoint{0.273cm}{1.395cm}}
\pgfpathcurveto{\pgfqpoint{0.273cm}{1.432cm}}{\pgfqpoint{0.259cm}{1.467cm}}{\pgfqpoint{0.233cm}{1.492cm}}
\pgfpathcurveto{\pgfqpoint{0.207cm}{1.518cm}}{\pgfqpoint{0.173cm}{1.532cm}}{\pgfqpoint{0.137cm}{1.532cm}}
\pgfpathcurveto{\pgfqpoint{0.1cm}{1.532cm}}{\pgfqpoint{0.066cm}{1.518cm}}{\pgfqpoint{0.04cm}{1.492cm}}
\pgfpathcurveto{\pgfqpoint{0.014cm}{1.467cm}}{\pgfqpoint{0cm}{1.432cm}}{\pgfqpoint{0cm}{1.395cm}}
\pgfpathcurveto{\pgfqpoint{0cm}{1.359cm}}{\pgfqpoint{0.014cm}{1.324cm}}{\pgfqpoint{0.04cm}{1.299cm}}
\pgfpathcurveto{\pgfqpoint{0.066cm}{1.273cm}}{\pgfqpoint{0.1cm}{1.258cm}}{\pgfqpoint{0.137cm}{1.258cm}}
\pgfpathcurveto{\pgfqpoint{0.173cm}{1.258cm}}{\pgfqpoint{0.207cm}{1.273cm}}{\pgfqpoint{0.233cm}{1.299cm}}
\pgfpathcurveto{\pgfqpoint{0.259cm}{1.324cm}}{\pgfqpoint{0.273cm}{1.359cm}}{\pgfqpoint{0.273cm}{1.395cm}}
\pgfusepath{fill}
\begin{pgfscope}
\pgfsetdash{}{0cm}
\pgfsetlinewidth{0.818mm}
\pgfsetmiterlimit{7.0}
\pgfpathmoveto{\pgfqpoint{0.682cm}{0.671cm}}
\pgfpathlineto{\pgfqpoint{0.679cm}{1.418cm}}
\pgfusepath{stroke}
\end{pgfscope}
\pgfpathmoveto{\pgfqpoint{0.815cm}{1.399cm}}
\pgfpathcurveto{\pgfqpoint{0.815cm}{1.435cm}}{\pgfqpoint{0.801cm}{1.47cm}}{\pgfqpoint{0.775cm}{1.496cm}}
\pgfpathcurveto{\pgfqpoint{0.75cm}{1.521cm}}{\pgfqpoint{0.715cm}{1.536cm}}{\pgfqpoint{0.679cm}{1.536cm}}
\pgfpathcurveto{\pgfqpoint{0.643cm}{1.536cm}}{\pgfqpoint{0.608cm}{1.521cm}}{\pgfqpoint{0.582cm}{1.496cm}}
\pgfpathcurveto{\pgfqpoint{0.557cm}{1.47cm}}{\pgfqpoint{0.542cm}{1.435cm}}{\pgfqpoint{0.542cm}{1.399cm}}
\pgfpathcurveto{\pgfqpoint{0.542cm}{1.363cm}}{\pgfqpoint{0.557cm}{1.328cm}}{\pgfqpoint{0.582cm}{1.302cm}}
\pgfpathcurveto{\pgfqpoint{0.608cm}{1.276cm}}{\pgfqpoint{0.643cm}{1.262cm}}{\pgfqpoint{0.679cm}{1.262cm}}
\pgfpathcurveto{\pgfqpoint{0.715cm}{1.262cm}}{\pgfqpoint{0.75cm}{1.276cm}}{\pgfqpoint{0.775cm}{1.302cm}}
\pgfpathcurveto{\pgfqpoint{0.801cm}{1.328cm}}{\pgfqpoint{0.815cm}{1.363cm}}{\pgfqpoint{0.815cm}{1.399cm}}
\pgfusepath{fill}
\pgfpathmoveto{\pgfqpoint{1.345cm}{1.371cm}}
\pgfpathcurveto{\pgfqpoint{1.345cm}{1.408cm}}{\pgfqpoint{1.331cm}{1.442cm}}{\pgfqpoint{1.305cm}{1.468cm}}
\pgfpathcurveto{\pgfqpoint{1.28cm}{1.494cm}}{\pgfqpoint{1.245cm}{1.508cm}}{\pgfqpoint{1.209cm}{1.508cm}}
\pgfpathcurveto{\pgfqpoint{1.172cm}{1.508cm}}{\pgfqpoint{1.138cm}{1.494cm}}{\pgfqpoint{1.112cm}{1.468cm}}
\pgfpathcurveto{\pgfqpoint{1.087cm}{1.442cm}}{\pgfqpoint{1.072cm}{1.408cm}}{\pgfqpoint{1.072cm}{1.371cm}}
\pgfpathcurveto{\pgfqpoint{1.072cm}{1.335cm}}{\pgfqpoint{1.087cm}{1.3cm}}{\pgfqpoint{1.112cm}{1.274cm}}
\pgfpathcurveto{\pgfqpoint{1.138cm}{1.249cm}}{\pgfqpoint{1.172cm}{1.234cm}}{\pgfqpoint{1.209cm}{1.234cm}}
\pgfpathcurveto{\pgfqpoint{1.245cm}{1.234cm}}{\pgfqpoint{1.28cm}{1.249cm}}{\pgfqpoint{1.305cm}{1.274cm}}
\pgfpathcurveto{\pgfqpoint{1.331cm}{1.3cm}}{\pgfqpoint{1.345cm}{1.335cm}}{\pgfqpoint{1.345cm}{1.371cm}}
\pgfusepath{fill}
\begin{pgfscope}
\pgfsetdash{}{0cm}
\pgfsetlinewidth{0.818mm}
\pgfsetroundcap
\pgfsetmiterlimit{4.0}
\pgfpathmoveto{\pgfqpoint{0.682cm}{0.671cm}}
\pgfpathlineto{\pgfqpoint{0.682cm}{0.042cm}}
\pgfusepath{stroke}
\end{pgfscope}
\end{pgfscope}
\end{pgfscope}
\end{pgfscope}
\end{tikzpicture}}}_{M, \varepsilon} =
   \llbracket X^3_{M, \varepsilon} \rrbracket,
   \end{equation}
leading to the decomposition $\varphi_{M, \varepsilon} = X_{M, \varepsilon} -
\lambda X_{M, \varepsilon}^{\!\resizebox{0.6em}{!}{
\begin{tikzpicture}
\pgfpathmoveto{\pgfqpoint{0cm}{-0.035cm}}
\pgfpathlineto{\pgfqpoint{1.376cm}{-0.035cm}}
\pgfpathlineto{\pgfqpoint{1.376cm}{1.552cm}}
\pgfpathlineto{\pgfqpoint{0cm}{1.552cm}}
\pgfpathclose
\pgfusepath{clip}
\begin{pgfscope}
\begin{pgfscope}
\pgfpathmoveto{\pgfqpoint{0cm}{-0.035cm}}
\pgfpathlineto{\pgfqpoint{1.376cm}{-0.035cm}}
\pgfpathlineto{\pgfqpoint{1.376cm}{1.552cm}}
\pgfpathlineto{\pgfqpoint{0cm}{1.552cm}}
\pgfpathclose
\pgfusepath{clip}
\begin{pgfscope}
\begin{pgfscope}
\pgfsetdash{}{0cm}
\pgfsetlinewidth{0.818mm}
\pgfsetroundcap
\pgfsetroundjoin
\pgfsetmiterlimit{7.0}
\definecolor{eps2pgf_color}{gray}{0}\pgfsetstrokecolor{eps2pgf_color}\pgfsetfillcolor{eps2pgf_color}
\pgfpathmoveto{\pgfqpoint{0.117cm}{1.421cm}}
\pgfpathlineto{\pgfqpoint{0.682cm}{0.671cm}}
\pgfpathlineto{\pgfqpoint{1.246cm}{1.421cm}}
\pgfusepath{stroke}
\end{pgfscope}
\definecolor{eps2pgf_color}{gray}{0}\pgfsetstrokecolor{eps2pgf_color}\pgfsetfillcolor{eps2pgf_color}
\pgfpathmoveto{\pgfqpoint{0.273cm}{1.395cm}}
\pgfpathcurveto{\pgfqpoint{0.273cm}{1.432cm}}{\pgfqpoint{0.259cm}{1.467cm}}{\pgfqpoint{0.233cm}{1.492cm}}
\pgfpathcurveto{\pgfqpoint{0.207cm}{1.518cm}}{\pgfqpoint{0.173cm}{1.532cm}}{\pgfqpoint{0.137cm}{1.532cm}}
\pgfpathcurveto{\pgfqpoint{0.1cm}{1.532cm}}{\pgfqpoint{0.066cm}{1.518cm}}{\pgfqpoint{0.04cm}{1.492cm}}
\pgfpathcurveto{\pgfqpoint{0.014cm}{1.467cm}}{\pgfqpoint{0cm}{1.432cm}}{\pgfqpoint{0cm}{1.395cm}}
\pgfpathcurveto{\pgfqpoint{0cm}{1.359cm}}{\pgfqpoint{0.014cm}{1.324cm}}{\pgfqpoint{0.04cm}{1.299cm}}
\pgfpathcurveto{\pgfqpoint{0.066cm}{1.273cm}}{\pgfqpoint{0.1cm}{1.258cm}}{\pgfqpoint{0.137cm}{1.258cm}}
\pgfpathcurveto{\pgfqpoint{0.173cm}{1.258cm}}{\pgfqpoint{0.207cm}{1.273cm}}{\pgfqpoint{0.233cm}{1.299cm}}
\pgfpathcurveto{\pgfqpoint{0.259cm}{1.324cm}}{\pgfqpoint{0.273cm}{1.359cm}}{\pgfqpoint{0.273cm}{1.395cm}}
\pgfusepath{fill}
\begin{pgfscope}
\pgfsetdash{}{0cm}
\pgfsetlinewidth{0.818mm}
\pgfsetmiterlimit{7.0}
\pgfpathmoveto{\pgfqpoint{0.682cm}{0.671cm}}
\pgfpathlineto{\pgfqpoint{0.679cm}{1.418cm}}
\pgfusepath{stroke}
\end{pgfscope}
\pgfpathmoveto{\pgfqpoint{0.815cm}{1.399cm}}
\pgfpathcurveto{\pgfqpoint{0.815cm}{1.435cm}}{\pgfqpoint{0.801cm}{1.47cm}}{\pgfqpoint{0.775cm}{1.496cm}}
\pgfpathcurveto{\pgfqpoint{0.75cm}{1.521cm}}{\pgfqpoint{0.715cm}{1.536cm}}{\pgfqpoint{0.679cm}{1.536cm}}
\pgfpathcurveto{\pgfqpoint{0.643cm}{1.536cm}}{\pgfqpoint{0.608cm}{1.521cm}}{\pgfqpoint{0.582cm}{1.496cm}}
\pgfpathcurveto{\pgfqpoint{0.557cm}{1.47cm}}{\pgfqpoint{0.542cm}{1.435cm}}{\pgfqpoint{0.542cm}{1.399cm}}
\pgfpathcurveto{\pgfqpoint{0.542cm}{1.363cm}}{\pgfqpoint{0.557cm}{1.328cm}}{\pgfqpoint{0.582cm}{1.302cm}}
\pgfpathcurveto{\pgfqpoint{0.608cm}{1.276cm}}{\pgfqpoint{0.643cm}{1.262cm}}{\pgfqpoint{0.679cm}{1.262cm}}
\pgfpathcurveto{\pgfqpoint{0.715cm}{1.262cm}}{\pgfqpoint{0.75cm}{1.276cm}}{\pgfqpoint{0.775cm}{1.302cm}}
\pgfpathcurveto{\pgfqpoint{0.801cm}{1.328cm}}{\pgfqpoint{0.815cm}{1.363cm}}{\pgfqpoint{0.815cm}{1.399cm}}
\pgfusepath{fill}
\pgfpathmoveto{\pgfqpoint{1.345cm}{1.371cm}}
\pgfpathcurveto{\pgfqpoint{1.345cm}{1.408cm}}{\pgfqpoint{1.331cm}{1.442cm}}{\pgfqpoint{1.305cm}{1.468cm}}
\pgfpathcurveto{\pgfqpoint{1.28cm}{1.494cm}}{\pgfqpoint{1.245cm}{1.508cm}}{\pgfqpoint{1.209cm}{1.508cm}}
\pgfpathcurveto{\pgfqpoint{1.172cm}{1.508cm}}{\pgfqpoint{1.138cm}{1.494cm}}{\pgfqpoint{1.112cm}{1.468cm}}
\pgfpathcurveto{\pgfqpoint{1.087cm}{1.442cm}}{\pgfqpoint{1.072cm}{1.408cm}}{\pgfqpoint{1.072cm}{1.371cm}}
\pgfpathcurveto{\pgfqpoint{1.072cm}{1.335cm}}{\pgfqpoint{1.087cm}{1.3cm}}{\pgfqpoint{1.112cm}{1.274cm}}
\pgfpathcurveto{\pgfqpoint{1.138cm}{1.249cm}}{\pgfqpoint{1.172cm}{1.234cm}}{\pgfqpoint{1.209cm}{1.234cm}}
\pgfpathcurveto{\pgfqpoint{1.245cm}{1.234cm}}{\pgfqpoint{1.28cm}{1.249cm}}{\pgfqpoint{1.305cm}{1.274cm}}
\pgfpathcurveto{\pgfqpoint{1.331cm}{1.3cm}}{\pgfqpoint{1.345cm}{1.335cm}}{\pgfqpoint{1.345cm}{1.371cm}}
\pgfusepath{fill}
\begin{pgfscope}
\pgfsetdash{}{0cm}
\pgfsetlinewidth{0.818mm}
\pgfsetroundcap
\pgfsetmiterlimit{4.0}
\pgfpathmoveto{\pgfqpoint{0.682cm}{0.671cm}}
\pgfpathlineto{\pgfqpoint{0.682cm}{0.042cm}}
\pgfusepath{stroke}
\end{pgfscope}
\end{pgfscope}
\end{pgfscope}
\end{pgfscope}
\end{tikzpicture}}} + \zeta_{M, \varepsilon}$. Writing down
the dynamics for $\zeta_{M, \varepsilon}$ we observe that the most irregular
term is the paraproduct $\llbracket X_{M, \varepsilon}^2 \rrbracket \succ
X^{\!\resizebox{0.6em}{!}{
\begin{tikzpicture}
\pgfpathmoveto{\pgfqpoint{0cm}{-0.035cm}}
\pgfpathlineto{\pgfqpoint{1.376cm}{-0.035cm}}
\pgfpathlineto{\pgfqpoint{1.376cm}{1.552cm}}
\pgfpathlineto{\pgfqpoint{0cm}{1.552cm}}
\pgfpathclose
\pgfusepath{clip}
\begin{pgfscope}
\begin{pgfscope}
\pgfpathmoveto{\pgfqpoint{0cm}{-0.035cm}}
\pgfpathlineto{\pgfqpoint{1.376cm}{-0.035cm}}
\pgfpathlineto{\pgfqpoint{1.376cm}{1.552cm}}
\pgfpathlineto{\pgfqpoint{0cm}{1.552cm}}
\pgfpathclose
\pgfusepath{clip}
\begin{pgfscope}
\begin{pgfscope}
\pgfsetdash{}{0cm}
\pgfsetlinewidth{0.818mm}
\pgfsetroundcap
\pgfsetroundjoin
\pgfsetmiterlimit{7.0}
\definecolor{eps2pgf_color}{gray}{0}\pgfsetstrokecolor{eps2pgf_color}\pgfsetfillcolor{eps2pgf_color}
\pgfpathmoveto{\pgfqpoint{0.117cm}{1.421cm}}
\pgfpathlineto{\pgfqpoint{0.682cm}{0.671cm}}
\pgfpathlineto{\pgfqpoint{1.246cm}{1.421cm}}
\pgfusepath{stroke}
\end{pgfscope}
\definecolor{eps2pgf_color}{gray}{0}\pgfsetstrokecolor{eps2pgf_color}\pgfsetfillcolor{eps2pgf_color}
\pgfpathmoveto{\pgfqpoint{0.273cm}{1.395cm}}
\pgfpathcurveto{\pgfqpoint{0.273cm}{1.432cm}}{\pgfqpoint{0.259cm}{1.467cm}}{\pgfqpoint{0.233cm}{1.492cm}}
\pgfpathcurveto{\pgfqpoint{0.207cm}{1.518cm}}{\pgfqpoint{0.173cm}{1.532cm}}{\pgfqpoint{0.137cm}{1.532cm}}
\pgfpathcurveto{\pgfqpoint{0.1cm}{1.532cm}}{\pgfqpoint{0.066cm}{1.518cm}}{\pgfqpoint{0.04cm}{1.492cm}}
\pgfpathcurveto{\pgfqpoint{0.014cm}{1.467cm}}{\pgfqpoint{0cm}{1.432cm}}{\pgfqpoint{0cm}{1.395cm}}
\pgfpathcurveto{\pgfqpoint{0cm}{1.359cm}}{\pgfqpoint{0.014cm}{1.324cm}}{\pgfqpoint{0.04cm}{1.299cm}}
\pgfpathcurveto{\pgfqpoint{0.066cm}{1.273cm}}{\pgfqpoint{0.1cm}{1.258cm}}{\pgfqpoint{0.137cm}{1.258cm}}
\pgfpathcurveto{\pgfqpoint{0.173cm}{1.258cm}}{\pgfqpoint{0.207cm}{1.273cm}}{\pgfqpoint{0.233cm}{1.299cm}}
\pgfpathcurveto{\pgfqpoint{0.259cm}{1.324cm}}{\pgfqpoint{0.273cm}{1.359cm}}{\pgfqpoint{0.273cm}{1.395cm}}
\pgfusepath{fill}
\begin{pgfscope}
\pgfsetdash{}{0cm}
\pgfsetlinewidth{0.818mm}
\pgfsetmiterlimit{7.0}
\pgfpathmoveto{\pgfqpoint{0.682cm}{0.671cm}}
\pgfpathlineto{\pgfqpoint{0.679cm}{1.418cm}}
\pgfusepath{stroke}
\end{pgfscope}
\pgfpathmoveto{\pgfqpoint{0.815cm}{1.399cm}}
\pgfpathcurveto{\pgfqpoint{0.815cm}{1.435cm}}{\pgfqpoint{0.801cm}{1.47cm}}{\pgfqpoint{0.775cm}{1.496cm}}
\pgfpathcurveto{\pgfqpoint{0.75cm}{1.521cm}}{\pgfqpoint{0.715cm}{1.536cm}}{\pgfqpoint{0.679cm}{1.536cm}}
\pgfpathcurveto{\pgfqpoint{0.643cm}{1.536cm}}{\pgfqpoint{0.608cm}{1.521cm}}{\pgfqpoint{0.582cm}{1.496cm}}
\pgfpathcurveto{\pgfqpoint{0.557cm}{1.47cm}}{\pgfqpoint{0.542cm}{1.435cm}}{\pgfqpoint{0.542cm}{1.399cm}}
\pgfpathcurveto{\pgfqpoint{0.542cm}{1.363cm}}{\pgfqpoint{0.557cm}{1.328cm}}{\pgfqpoint{0.582cm}{1.302cm}}
\pgfpathcurveto{\pgfqpoint{0.608cm}{1.276cm}}{\pgfqpoint{0.643cm}{1.262cm}}{\pgfqpoint{0.679cm}{1.262cm}}
\pgfpathcurveto{\pgfqpoint{0.715cm}{1.262cm}}{\pgfqpoint{0.75cm}{1.276cm}}{\pgfqpoint{0.775cm}{1.302cm}}
\pgfpathcurveto{\pgfqpoint{0.801cm}{1.328cm}}{\pgfqpoint{0.815cm}{1.363cm}}{\pgfqpoint{0.815cm}{1.399cm}}
\pgfusepath{fill}
\pgfpathmoveto{\pgfqpoint{1.345cm}{1.371cm}}
\pgfpathcurveto{\pgfqpoint{1.345cm}{1.408cm}}{\pgfqpoint{1.331cm}{1.442cm}}{\pgfqpoint{1.305cm}{1.468cm}}
\pgfpathcurveto{\pgfqpoint{1.28cm}{1.494cm}}{\pgfqpoint{1.245cm}{1.508cm}}{\pgfqpoint{1.209cm}{1.508cm}}
\pgfpathcurveto{\pgfqpoint{1.172cm}{1.508cm}}{\pgfqpoint{1.138cm}{1.494cm}}{\pgfqpoint{1.112cm}{1.468cm}}
\pgfpathcurveto{\pgfqpoint{1.087cm}{1.442cm}}{\pgfqpoint{1.072cm}{1.408cm}}{\pgfqpoint{1.072cm}{1.371cm}}
\pgfpathcurveto{\pgfqpoint{1.072cm}{1.335cm}}{\pgfqpoint{1.087cm}{1.3cm}}{\pgfqpoint{1.112cm}{1.274cm}}
\pgfpathcurveto{\pgfqpoint{1.138cm}{1.249cm}}{\pgfqpoint{1.172cm}{1.234cm}}{\pgfqpoint{1.209cm}{1.234cm}}
\pgfpathcurveto{\pgfqpoint{1.245cm}{1.234cm}}{\pgfqpoint{1.28cm}{1.249cm}}{\pgfqpoint{1.305cm}{1.274cm}}
\pgfpathcurveto{\pgfqpoint{1.331cm}{1.3cm}}{\pgfqpoint{1.345cm}{1.335cm}}{\pgfqpoint{1.345cm}{1.371cm}}
\pgfusepath{fill}
\begin{pgfscope}
\pgfsetdash{}{0cm}
\pgfsetlinewidth{0.818mm}
\pgfsetroundcap
\pgfsetmiterlimit{4.0}
\pgfpathmoveto{\pgfqpoint{0.682cm}{0.671cm}}
\pgfpathlineto{\pgfqpoint{0.682cm}{0.042cm}}
\pgfusepath{stroke}
\end{pgfscope}
\end{pgfscope}
\end{pgfscope}
\end{pgfscope}
\end{tikzpicture}}}_{M, \varepsilon}$ which can be bounded uniformly in $\mathscr{C}
\hspace{.1em}^{- 1 - \kappa,\varepsilon} (\rho^{\sigma})$ and hence this is not yet
sufficient for the energy method outlined above. Indeed,  the expected (uniform)
regularity of $\zeta_{M, \varepsilon}$ is $\mathscr{C} \hspace{.1em}^{1 -
\kappa,\varepsilon} (\rho^{\sigma})$ and so the term $\langle \zeta_{M,\varepsilon},\llbracket X_{M, \varepsilon}^2 \rrbracket \succ
X^{\!\resizebox{0.6em}{!}{
\begin{tikzpicture}
\pgfpathmoveto{\pgfqpoint{0cm}{-0.035cm}}
\pgfpathlineto{\pgfqpoint{1.376cm}{-0.035cm}}
\pgfpathlineto{\pgfqpoint{1.376cm}{1.552cm}}
\pgfpathlineto{\pgfqpoint{0cm}{1.552cm}}
\pgfpathclose
\pgfusepath{clip}
\begin{pgfscope}
\begin{pgfscope}
\pgfpathmoveto{\pgfqpoint{0cm}{-0.035cm}}
\pgfpathlineto{\pgfqpoint{1.376cm}{-0.035cm}}
\pgfpathlineto{\pgfqpoint{1.376cm}{1.552cm}}
\pgfpathlineto{\pgfqpoint{0cm}{1.552cm}}
\pgfpathclose
\pgfusepath{clip}
\begin{pgfscope}
\begin{pgfscope}
\pgfsetdash{}{0cm}
\pgfsetlinewidth{0.818mm}
\pgfsetroundcap
\pgfsetroundjoin
\pgfsetmiterlimit{7.0}
\definecolor{eps2pgf_color}{gray}{0}\pgfsetstrokecolor{eps2pgf_color}\pgfsetfillcolor{eps2pgf_color}
\pgfpathmoveto{\pgfqpoint{0.117cm}{1.421cm}}
\pgfpathlineto{\pgfqpoint{0.682cm}{0.671cm}}
\pgfpathlineto{\pgfqpoint{1.246cm}{1.421cm}}
\pgfusepath{stroke}
\end{pgfscope}
\definecolor{eps2pgf_color}{gray}{0}\pgfsetstrokecolor{eps2pgf_color}\pgfsetfillcolor{eps2pgf_color}
\pgfpathmoveto{\pgfqpoint{0.273cm}{1.395cm}}
\pgfpathcurveto{\pgfqpoint{0.273cm}{1.432cm}}{\pgfqpoint{0.259cm}{1.467cm}}{\pgfqpoint{0.233cm}{1.492cm}}
\pgfpathcurveto{\pgfqpoint{0.207cm}{1.518cm}}{\pgfqpoint{0.173cm}{1.532cm}}{\pgfqpoint{0.137cm}{1.532cm}}
\pgfpathcurveto{\pgfqpoint{0.1cm}{1.532cm}}{\pgfqpoint{0.066cm}{1.518cm}}{\pgfqpoint{0.04cm}{1.492cm}}
\pgfpathcurveto{\pgfqpoint{0.014cm}{1.467cm}}{\pgfqpoint{0cm}{1.432cm}}{\pgfqpoint{0cm}{1.395cm}}
\pgfpathcurveto{\pgfqpoint{0cm}{1.359cm}}{\pgfqpoint{0.014cm}{1.324cm}}{\pgfqpoint{0.04cm}{1.299cm}}
\pgfpathcurveto{\pgfqpoint{0.066cm}{1.273cm}}{\pgfqpoint{0.1cm}{1.258cm}}{\pgfqpoint{0.137cm}{1.258cm}}
\pgfpathcurveto{\pgfqpoint{0.173cm}{1.258cm}}{\pgfqpoint{0.207cm}{1.273cm}}{\pgfqpoint{0.233cm}{1.299cm}}
\pgfpathcurveto{\pgfqpoint{0.259cm}{1.324cm}}{\pgfqpoint{0.273cm}{1.359cm}}{\pgfqpoint{0.273cm}{1.395cm}}
\pgfusepath{fill}
\begin{pgfscope}
\pgfsetdash{}{0cm}
\pgfsetlinewidth{0.818mm}
\pgfsetmiterlimit{7.0}
\pgfpathmoveto{\pgfqpoint{0.682cm}{0.671cm}}
\pgfpathlineto{\pgfqpoint{0.679cm}{1.418cm}}
\pgfusepath{stroke}
\end{pgfscope}
\pgfpathmoveto{\pgfqpoint{0.815cm}{1.399cm}}
\pgfpathcurveto{\pgfqpoint{0.815cm}{1.435cm}}{\pgfqpoint{0.801cm}{1.47cm}}{\pgfqpoint{0.775cm}{1.496cm}}
\pgfpathcurveto{\pgfqpoint{0.75cm}{1.521cm}}{\pgfqpoint{0.715cm}{1.536cm}}{\pgfqpoint{0.679cm}{1.536cm}}
\pgfpathcurveto{\pgfqpoint{0.643cm}{1.536cm}}{\pgfqpoint{0.608cm}{1.521cm}}{\pgfqpoint{0.582cm}{1.496cm}}
\pgfpathcurveto{\pgfqpoint{0.557cm}{1.47cm}}{\pgfqpoint{0.542cm}{1.435cm}}{\pgfqpoint{0.542cm}{1.399cm}}
\pgfpathcurveto{\pgfqpoint{0.542cm}{1.363cm}}{\pgfqpoint{0.557cm}{1.328cm}}{\pgfqpoint{0.582cm}{1.302cm}}
\pgfpathcurveto{\pgfqpoint{0.608cm}{1.276cm}}{\pgfqpoint{0.643cm}{1.262cm}}{\pgfqpoint{0.679cm}{1.262cm}}
\pgfpathcurveto{\pgfqpoint{0.715cm}{1.262cm}}{\pgfqpoint{0.75cm}{1.276cm}}{\pgfqpoint{0.775cm}{1.302cm}}
\pgfpathcurveto{\pgfqpoint{0.801cm}{1.328cm}}{\pgfqpoint{0.815cm}{1.363cm}}{\pgfqpoint{0.815cm}{1.399cm}}
\pgfusepath{fill}
\pgfpathmoveto{\pgfqpoint{1.345cm}{1.371cm}}
\pgfpathcurveto{\pgfqpoint{1.345cm}{1.408cm}}{\pgfqpoint{1.331cm}{1.442cm}}{\pgfqpoint{1.305cm}{1.468cm}}
\pgfpathcurveto{\pgfqpoint{1.28cm}{1.494cm}}{\pgfqpoint{1.245cm}{1.508cm}}{\pgfqpoint{1.209cm}{1.508cm}}
\pgfpathcurveto{\pgfqpoint{1.172cm}{1.508cm}}{\pgfqpoint{1.138cm}{1.494cm}}{\pgfqpoint{1.112cm}{1.468cm}}
\pgfpathcurveto{\pgfqpoint{1.087cm}{1.442cm}}{\pgfqpoint{1.072cm}{1.408cm}}{\pgfqpoint{1.072cm}{1.371cm}}
\pgfpathcurveto{\pgfqpoint{1.072cm}{1.335cm}}{\pgfqpoint{1.087cm}{1.3cm}}{\pgfqpoint{1.112cm}{1.274cm}}
\pgfpathcurveto{\pgfqpoint{1.138cm}{1.249cm}}{\pgfqpoint{1.172cm}{1.234cm}}{\pgfqpoint{1.209cm}{1.234cm}}
\pgfpathcurveto{\pgfqpoint{1.245cm}{1.234cm}}{\pgfqpoint{1.28cm}{1.249cm}}{\pgfqpoint{1.305cm}{1.274cm}}
\pgfpathcurveto{\pgfqpoint{1.331cm}{1.3cm}}{\pgfqpoint{1.345cm}{1.335cm}}{\pgfqpoint{1.345cm}{1.371cm}}
\pgfusepath{fill}
\begin{pgfscope}
\pgfsetdash{}{0cm}
\pgfsetlinewidth{0.818mm}
\pgfsetroundcap
\pgfsetmiterlimit{4.0}
\pgfpathmoveto{\pgfqpoint{0.682cm}{0.671cm}}
\pgfpathlineto{\pgfqpoint{0.682cm}{0.042cm}}
\pgfusepath{stroke}
\end{pgfscope}
\end{pgfscope}
\end{pgfscope}
\end{pgfscope}
\end{tikzpicture}}}_{M, \varepsilon}\rangle$ cannot be controlled.
However, we point out that not much is missing.

In order to overcome this issue, we proceed differently than the above cited
works and let $Y_{M, \varepsilon}$ be a solution to
\begin{equation}
 \LL_{\varepsilon} Y_{M, \varepsilon} = - \llbracket
  X_{M, \varepsilon}^3 \rrbracket - 3 \lambda (\mathscr{U}^{\varepsilon}_{>}
  \llbracket X_{M, \varepsilon}^2 \rrbracket) \succ Y_{M, \varepsilon}, \qquad
  Y_{M, \varepsilon} (0) = - \lambda X^{\!\resizebox{0.6em}{!}{
\begin{tikzpicture}
\pgfpathmoveto{\pgfqpoint{0cm}{-0.035cm}}
\pgfpathlineto{\pgfqpoint{1.376cm}{-0.035cm}}
\pgfpathlineto{\pgfqpoint{1.376cm}{1.552cm}}
\pgfpathlineto{\pgfqpoint{0cm}{1.552cm}}
\pgfpathclose
\pgfusepath{clip}
\begin{pgfscope}
\begin{pgfscope}
\pgfpathmoveto{\pgfqpoint{0cm}{-0.035cm}}
\pgfpathlineto{\pgfqpoint{1.376cm}{-0.035cm}}
\pgfpathlineto{\pgfqpoint{1.376cm}{1.552cm}}
\pgfpathlineto{\pgfqpoint{0cm}{1.552cm}}
\pgfpathclose
\pgfusepath{clip}
\begin{pgfscope}
\begin{pgfscope}
\pgfsetdash{}{0cm}
\pgfsetlinewidth{0.818mm}
\pgfsetroundcap
\pgfsetroundjoin
\pgfsetmiterlimit{7.0}
\definecolor{eps2pgf_color}{gray}{0}\pgfsetstrokecolor{eps2pgf_color}\pgfsetfillcolor{eps2pgf_color}
\pgfpathmoveto{\pgfqpoint{0.117cm}{1.421cm}}
\pgfpathlineto{\pgfqpoint{0.682cm}{0.671cm}}
\pgfpathlineto{\pgfqpoint{1.246cm}{1.421cm}}
\pgfusepath{stroke}
\end{pgfscope}
\definecolor{eps2pgf_color}{gray}{0}\pgfsetstrokecolor{eps2pgf_color}\pgfsetfillcolor{eps2pgf_color}
\pgfpathmoveto{\pgfqpoint{0.273cm}{1.395cm}}
\pgfpathcurveto{\pgfqpoint{0.273cm}{1.432cm}}{\pgfqpoint{0.259cm}{1.467cm}}{\pgfqpoint{0.233cm}{1.492cm}}
\pgfpathcurveto{\pgfqpoint{0.207cm}{1.518cm}}{\pgfqpoint{0.173cm}{1.532cm}}{\pgfqpoint{0.137cm}{1.532cm}}
\pgfpathcurveto{\pgfqpoint{0.1cm}{1.532cm}}{\pgfqpoint{0.066cm}{1.518cm}}{\pgfqpoint{0.04cm}{1.492cm}}
\pgfpathcurveto{\pgfqpoint{0.014cm}{1.467cm}}{\pgfqpoint{0cm}{1.432cm}}{\pgfqpoint{0cm}{1.395cm}}
\pgfpathcurveto{\pgfqpoint{0cm}{1.359cm}}{\pgfqpoint{0.014cm}{1.324cm}}{\pgfqpoint{0.04cm}{1.299cm}}
\pgfpathcurveto{\pgfqpoint{0.066cm}{1.273cm}}{\pgfqpoint{0.1cm}{1.258cm}}{\pgfqpoint{0.137cm}{1.258cm}}
\pgfpathcurveto{\pgfqpoint{0.173cm}{1.258cm}}{\pgfqpoint{0.207cm}{1.273cm}}{\pgfqpoint{0.233cm}{1.299cm}}
\pgfpathcurveto{\pgfqpoint{0.259cm}{1.324cm}}{\pgfqpoint{0.273cm}{1.359cm}}{\pgfqpoint{0.273cm}{1.395cm}}
\pgfusepath{fill}
\begin{pgfscope}
\pgfsetdash{}{0cm}
\pgfsetlinewidth{0.818mm}
\pgfsetmiterlimit{7.0}
\pgfpathmoveto{\pgfqpoint{0.682cm}{0.671cm}}
\pgfpathlineto{\pgfqpoint{0.679cm}{1.418cm}}
\pgfusepath{stroke}
\end{pgfscope}
\pgfpathmoveto{\pgfqpoint{0.815cm}{1.399cm}}
\pgfpathcurveto{\pgfqpoint{0.815cm}{1.435cm}}{\pgfqpoint{0.801cm}{1.47cm}}{\pgfqpoint{0.775cm}{1.496cm}}
\pgfpathcurveto{\pgfqpoint{0.75cm}{1.521cm}}{\pgfqpoint{0.715cm}{1.536cm}}{\pgfqpoint{0.679cm}{1.536cm}}
\pgfpathcurveto{\pgfqpoint{0.643cm}{1.536cm}}{\pgfqpoint{0.608cm}{1.521cm}}{\pgfqpoint{0.582cm}{1.496cm}}
\pgfpathcurveto{\pgfqpoint{0.557cm}{1.47cm}}{\pgfqpoint{0.542cm}{1.435cm}}{\pgfqpoint{0.542cm}{1.399cm}}
\pgfpathcurveto{\pgfqpoint{0.542cm}{1.363cm}}{\pgfqpoint{0.557cm}{1.328cm}}{\pgfqpoint{0.582cm}{1.302cm}}
\pgfpathcurveto{\pgfqpoint{0.608cm}{1.276cm}}{\pgfqpoint{0.643cm}{1.262cm}}{\pgfqpoint{0.679cm}{1.262cm}}
\pgfpathcurveto{\pgfqpoint{0.715cm}{1.262cm}}{\pgfqpoint{0.75cm}{1.276cm}}{\pgfqpoint{0.775cm}{1.302cm}}
\pgfpathcurveto{\pgfqpoint{0.801cm}{1.328cm}}{\pgfqpoint{0.815cm}{1.363cm}}{\pgfqpoint{0.815cm}{1.399cm}}
\pgfusepath{fill}
\pgfpathmoveto{\pgfqpoint{1.345cm}{1.371cm}}
\pgfpathcurveto{\pgfqpoint{1.345cm}{1.408cm}}{\pgfqpoint{1.331cm}{1.442cm}}{\pgfqpoint{1.305cm}{1.468cm}}
\pgfpathcurveto{\pgfqpoint{1.28cm}{1.494cm}}{\pgfqpoint{1.245cm}{1.508cm}}{\pgfqpoint{1.209cm}{1.508cm}}
\pgfpathcurveto{\pgfqpoint{1.172cm}{1.508cm}}{\pgfqpoint{1.138cm}{1.494cm}}{\pgfqpoint{1.112cm}{1.468cm}}
\pgfpathcurveto{\pgfqpoint{1.087cm}{1.442cm}}{\pgfqpoint{1.072cm}{1.408cm}}{\pgfqpoint{1.072cm}{1.371cm}}
\pgfpathcurveto{\pgfqpoint{1.072cm}{1.335cm}}{\pgfqpoint{1.087cm}{1.3cm}}{\pgfqpoint{1.112cm}{1.274cm}}
\pgfpathcurveto{\pgfqpoint{1.138cm}{1.249cm}}{\pgfqpoint{1.172cm}{1.234cm}}{\pgfqpoint{1.209cm}{1.234cm}}
\pgfpathcurveto{\pgfqpoint{1.245cm}{1.234cm}}{\pgfqpoint{1.28cm}{1.249cm}}{\pgfqpoint{1.305cm}{1.274cm}}
\pgfpathcurveto{\pgfqpoint{1.331cm}{1.3cm}}{\pgfqpoint{1.345cm}{1.335cm}}{\pgfqpoint{1.345cm}{1.371cm}}
\pgfusepath{fill}
\begin{pgfscope}
\pgfsetdash{}{0cm}
\pgfsetlinewidth{0.818mm}
\pgfsetroundcap
\pgfsetmiterlimit{4.0}
\pgfpathmoveto{\pgfqpoint{0.682cm}{0.671cm}}
\pgfpathlineto{\pgfqpoint{0.682cm}{0.042cm}}
\pgfusepath{stroke}
\end{pgfscope}
\end{pgfscope}
\end{pgfscope}
\end{pgfscope}
\end{tikzpicture}}}_{M, \varepsilon}(0),
  \label{eq:Y1}
\end{equation}
where $\mathscr{U}^{\varepsilon}_{>}$ is the localization operator defined in
Section \ref{s:l1}. With a suitable choice of the constant $L = L (\lambda, M,
\varepsilon)$ determining $\mathscr{U}^{\varepsilon}_{>}$ (cf. Lemma
\ref{lem:loc}, Lemma \ref{lem:Y1}) we are able to construct the unique solution to this problem
via Banach's fixed point theorem. Consequently, we find our decomposition
$\varphi_{M, \varepsilon} = X_{M, \varepsilon} + Y_{M, \varepsilon} + \phi_{M,
\varepsilon}$ together with the dynamics for the remainder
\begin{equation}
  \LL_{\varepsilon}\phi_{M, \varepsilon} + \lambda
  \phi_{M, \varepsilon}^3 = - 3 \lambda \llbracket X_{M, \varepsilon}^2
  \rrbracket \succ \phi_{M, \varepsilon} - 3 \lambda \llbracket X_{M,
  \varepsilon}^2 \rrbracket \circ \phi_{M, \varepsilon} - 3 \lambda^2 b_{M,
  \varepsilon} \phi_{M, \varepsilon} + \Xi_{M, \varepsilon} . \label{eq:ph}
\end{equation}
The first term on the right hand side is the most irregular contribution, the
second term is not controlled uniformly in $M, \varepsilon$, the third term is
needed for the renormalization and $\Xi_{M, \varepsilon}$ contains various
terms that are more regular and in principle not problematic or that can be
constructed as stochastic objects using the remaining counterterm $- 3
\lambda^2 b_{M, \varepsilon} (X_{M, \varepsilon} + Y_{M, \varepsilon})$.

The advantage of this decomposition with $\phi_{M, \varepsilon}$ as opposed to
the usual approach leading to $\zeta_{M, \varepsilon}$ above is that together
with $\llbracket X^3_{M, \varepsilon} \rrbracket$ we cancelled also the second
most irregular contribution $(\mathscr{U}^{\varepsilon}_{>} \llbracket X_{M,
\varepsilon}^2 \rrbracket) \succ Y_{M, \varepsilon}$, which is too irregular
to be controlled as a forcing $f$ using the energy method. The same difficulty
of course comes with $\llbracket X_{M, \varepsilon}^2 \rrbracket \succ
\phi_{M, \varepsilon}$ in {\eqref{eq:ph}}, however, since it depends on the
solution $\phi_{M, \varepsilon}$ we are able to control it using a
paracontrolled ansatz. To explain this, let us also turn our attention to the
resonant product $\llbracket X_{M, \varepsilon}^2 \rrbracket \circ \phi_{M,
\varepsilon}$ which poses problems as well. When applying the energy method to
{\eqref{eq:ph}}, these two terms appear in the form
\[ \langle \rho^4 \phi_{M, \varepsilon}, - 3 \lambda \llbracket X_{M,
   \varepsilon}^2 \rrbracket \circ \phi_{M, \varepsilon} \rangle_{\varepsilon}
   + \langle \rho^4 \phi_{M, \varepsilon}, - 3 \lambda \llbracket X_{M,
   \varepsilon}^2 \rrbracket \succ \phi_{M, \varepsilon}
   \rangle_{\varepsilon}, \]
where we included a polynomial weight $\rho$ as in \eqref{eq:weight}. The key observation is that the
presence of the duality product permits to show that these two terms
{\em{approximately}} coincide, in the sense that their difference denoted by
$D_{\rho^4, \varepsilon} (\phi_{M, \varepsilon}, - 3 \lambda \llbracket
X^2_{M, \varepsilon} \rrbracket, \phi_{M, \varepsilon})$ is controlled by the
expected uniform bounds. This is proven generally in Lemma \ref{lem:dual1}. As a
consequence, we obtain
\[ \frac{1}{2} \partial_t \| \phi_{M, \varepsilon} \|_{L^{2, \varepsilon}}^2 +
   \lambda \| \phi_{M, \varepsilon} \|_{L^{4, \varepsilon}}^4 + \langle
   \phi_{M, \varepsilon}, \Q_{\varepsilon}\phi_{M, \varepsilon}
   \rangle_{\varepsilon} \]
\[ = \langle \rho^4 \phi_{M, \varepsilon}, - 3 \cdummy 2 \lambda \llbracket
   X_{M, \varepsilon}^2 \rrbracket \succ \phi_{M, \varepsilon}
   \rangle_{\varepsilon} + D_{\rho^4, \varepsilon} (\phi_{M, \varepsilon}, - 3
   \lambda \llbracket X^2_{M, \varepsilon} \rrbracket, \phi_{M, \varepsilon})
   + \Xi_{M, \varepsilon} . \]

Finally, since the last term on the left hand side as well as the first term
on the right hand side are diverging, the idea is to couple them by the
following paracontrolled ansatz. We define
\[\Q_{\varepsilon} \psi_{M, \varepsilon} \assign \Q_{\varepsilon} \phi_{M, \varepsilon} + 3 \llbracket X_{M,
   \varepsilon}^2 \rrbracket \succ \phi_{M, \varepsilon} \]
and expect that the sum of the two terms on the right hand side is more
regular than each of them separately. In other words, $\psi_{M,
\varepsilon}$ is (uniformly) more regular than $\phi_{M, \varepsilon}$.
Indeed, with this ansatz we may complete the square and  obtain
\[ \frac{1}{2} \partial_t \| \rho^2 \phi_{M, \varepsilon} \|_{L^{2,
   \varepsilon}}^2 + \lambda \| \rho \phi_{M, \varepsilon} \|_{L^{4,
   \varepsilon}}^4 + m^2 \| \rho^2 \psi_{M, \varepsilon} \|_{L^{2,
   \varepsilon}}^2 + \| \rho^2 \nabla_{\varepsilon} \psi_{M, \varepsilon}
   \|_{L^{2, \varepsilon}}^2 = \Theta_{\rho^4, M, \varepsilon} + \Psi_{\rho^4,
   M, \varepsilon}, \]
where the right hand side, given in Lemma \ref{lem:energy12}, can be controlled
by the norms on the left hand side, in the spirit of the energy method 
discussed above.

These considerations lead  to our first main result proved as Theorem
\ref{th:energy-estimate} below. In what follows,   $Q_{\rho}(\mathbb{X}_{M,\varepsilon})$ denotes a polynomial in the $\rho$-weighted norms of the involved stochastic objects, the precise definition can be found in Section \ref{ssec:stoch}.

\begin{theorem}
  \label{th:energy-estimate-int}Let $\rho$ be a weight such that $\rho^{\iota}
  \in L^{4, 0}$ for some $\iota \in (0, 1)$. There exists a constant $\alpha =
  \alpha (m^2) > 0$ such that
  \[ \frac{1}{2} \partial_t \| \rho^2 \phi_{M, \varepsilon} \|_{L^{2,
     \varepsilon}}^2 + \alpha [\lambda \| \rho \phi_{M, \varepsilon} \|_{L^{4,
     \varepsilon}}^4 + m^2 \| \rho^2 \psi_{M, \varepsilon} \|_{L^{2,
     \varepsilon}}^2 + \| \rho^2 \nabla_{\varepsilon} \psi_{M, \varepsilon}
     \|_{L^{2, \varepsilon}}^2] + \| \rho^2 \phi_{M, \varepsilon} \|_{H^{1 - 2
     \kappa, \varepsilon}}^2 \]
  \[ \leqslant C_{\lambda, t} Q_{\rho} (\mathbb{X}_{M, \varepsilon}), \]
  where $C_{\lambda, t} = \lambda^3 + \lambda^{(12 - \theta) / (2 + \theta)} |
  \log t |^{4 / (2 + \theta)} + \lambda^7$ for $\theta = \frac{1 / 2 - 4
  \kappa}{1 - 2 \kappa}$.
\end{theorem}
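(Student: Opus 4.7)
The plan is to perform a weighted $L^2$ energy estimate on equation \eqref{eq:ph} with test function $\rho^4 \phi_{M,\varepsilon}$, using the paracontrolled ansatz for $\psi_{M,\varepsilon}$ to tame the most singular terms. Concretely, I would start by pairing \eqref{eq:ph} with $\rho^4 \phi_{M,\varepsilon}$ in the discrete $\langle\cdot,\cdot\rangle_{M,\varepsilon}$ inner product. The time derivative gives the $\tfrac12\partial_t\|\rho^2\phi_{M,\varepsilon}\|_{L^{2,\varepsilon}}^2$ term, the cubic term yields $\lambda\|\rho \phi_{M,\varepsilon}\|_{L^{4,\varepsilon}}^4$, and $\langle \rho^4\phi_{M,\varepsilon}, \Q_\varepsilon\phi_{M,\varepsilon}\rangle_\varepsilon$ provides the (weighted) quadratic form that will eventually be rewritten through the substitution $\Q_\varepsilon\psi_{M,\varepsilon}=\Q_\varepsilon\phi_{M,\varepsilon}+3\llbracket X^2_{M,\varepsilon}\rrbracket\succ\phi_{M,\varepsilon}$.

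The first delicate step is the resonant term $\langle \rho^4\phi_{M,\varepsilon},-3\lambda\llbracket X^2_{M,\varepsilon}\rrbracket\circ\phi_{M,\varepsilon}\rangle_\varepsilon$: I would apply the duality/commutator Lemma~\ref{lem:dual1} invoked in the sketch above to replace it, up to the commutator defect $D_{\rho^4,\varepsilon}(\phi_{M,\varepsilon},-3\lambda\llbracket X^2_{M,\varepsilon}\rrbracket,\phi_{M,\varepsilon})$, by a term of the same shape as the paraproduct contribution, producing a prefactor $6\lambda$ in front of $\langle\rho^4\phi_{M,\varepsilon},\llbracket X^2_{M,\varepsilon}\rrbracket\succ\phi_{M,\varepsilon}\rangle_\varepsilon$. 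Then the paracontrolled ansatz is used to complete the square: writing $\phi_{M,\varepsilon}=\psi_{M,\varepsilon}-3\Q_\varepsilon^{-1}(\llbracket X^2_{M,\varepsilon}\rrbracket\succ\phi_{M,\varepsilon})$ inside the bilinear form $\langle\rho^4\phi_{M,\varepsilon},\Q_\varepsilon\phi_{M,\varepsilon}\rangle_\varepsilon$ together with the coupling to the $6\lambda$ paraproduct term converts them into $m^2\|\rho^2\psi_{M,\varepsilon}\|_{L^{2,\varepsilon}}^2+\|\rho^2\nabla_\varepsilon\psi_{M,\varepsilon}\|_{L^{2,\varepsilon}}^2$ plus commutator/weight error terms handled by the standard paraproduct estimates in weighted Besov spaces (recalled in the overview and in Appendix~\ref{s:app}). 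This is precisely the step where Lemma~\ref{lem:energy12} is used to package the right-hand side as $\Theta_{\rho^4,M,\varepsilon}+\Psi_{\rho^4,M,\varepsilon}$.

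The remaining forcing $\Xi_{M,\varepsilon}$, the renormalization counterterm $-3\lambda^2 b_{M,\varepsilon}\phi_{M,\varepsilon}$ (which by construction of $b_{M,\varepsilon}$ cancels the diverging part coming from contractions in the $\psi$/paraproduct trick), and the commutator defect $D_{\rho^4,\varepsilon}$ must be estimated by products of $\rho$-weighted norms of the stochastic data (gathered in $Q_\rho(\mathbb{X}_{M,\varepsilon})$) times powers of $\|\rho\phi_{M,\varepsilon}\|_{L^{4,\varepsilon}}$, $\|\rho^2\psi_{M,\varepsilon}\|_{H^{1,\varepsilon}}$, and $\|\rho^2\phi_{M,\varepsilon}\|_{H^{1-2\kappa,\varepsilon}}$. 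Using weighted H\"older (with weight split $\rho^4=\rho\cdot\rho\cdot\rho\cdot\rho$ and the assumption $\rho^\iota\in L^{4,0}$) together with Besov product/Schauder bounds uniform in $(M,\varepsilon)$, each contribution is bounded by $C\,\lambda^{\alpha_i}\,Q_\rho(\mathbb{X}_{M,\varepsilon})\,N^{\beta_i}$, where $N$ is one of the three coercive norms on the left-hand side. Young's inequality then absorbs the $N^{\beta_i}$ into $\alpha[\lambda\|\rho\phi\|_{L^4}^4+m^2\|\rho^2\psi\|_{L^2}^2+\|\rho^2\nabla_\varepsilon\psi\|_{L^2}^2]+\|\rho^2\phi\|_{H^{1-2\kappa}}^2$ at the cost of a $\lambda$-dependent prefactor; book-keeping the exponents produces exactly the combination $\lambda^3+\lambda^{(12-\theta)/(2+\theta)}+\lambda^7$ asserted, with $\theta=(1/2-4\kappa)/(1-2\kappa)$ arising from the interpolation between $\|\rho\phi\|_{L^4}^4$ and $\|\rho^2\phi\|_{H^{1-2\kappa}}^2$ that is needed to absorb the term with lowest regularity of $\phi$.

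The $|\log t|^{4/(2+\theta)}$ factor is the only genuinely $t$-dependent piece and enters through $Y_{M,\varepsilon}$: because $Y_{M,\varepsilon}(0)=-\lambda\tthreeone{X}_{M,\varepsilon}(0)$ and $Y_{M,\varepsilon}$ solves \eqref{eq:Y1} with a drift of size $\lambda(\mathscr{U}^\varepsilon_{>}\llbracket X^2\rrbracket)\succ Y$, the Schauder/fixed-point bound controlling $Y_{M,\varepsilon}$ in the relevant weighted Besov norm blows up only logarithmically as $t\downarrow 0$, and this logarithm is then raised to the interpolation power dictated by Young's inequality in the previous step. I expect the main obstacle to be the careful book-keeping in this last stage: one has to choose the Young splits and the localization scale $L=L(\lambda,M,\varepsilon)$ in $\mathscr{U}^\varepsilon_{>}$ compatibly with both the paracontrolled ansatz and the coercive norms, so that (i) every constant is uniform in $M,\varepsilon$, (ii) no spurious $\lambda^{-1}$ appears (which is why the stochastic objects in $Q_\rho$ are normalized by appropriate powers of $\lambda$), and (iii) the final exponents of $\lambda$ match the claimed $\lambda^3+\lambda^{(12-\theta)/(2+\theta)}+\lambda^7$. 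The rest is systematic application of the paracontrolled toolkit and the duality/commutator lemmas collected in the paper.
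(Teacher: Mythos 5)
Your overall plan matches the paper's: test \eqref{eq:ph} against $\rho^4\phi_{M,\varepsilon}$, use the approximate-duality Lemma~\ref{lem:dual1} to fold the resonant piece into a $-6\lambda$ paraproduct, complete the square via the ansatz $\Q_\varepsilon\psi_{M,\varepsilon}=\Q_\varepsilon\phi_{M,\varepsilon}+3\lambda\llbracket X^2_{M,\varepsilon}\rrbracket\succ\phi_{M,\varepsilon}$ to extract $m^2\|\rho^2\psi\|_{L^{2,\varepsilon}}^2+\|\rho^2\nabla_\varepsilon\psi\|_{L^{2,\varepsilon}}^2$, and then bound each forcing contribution by $\lambda^{\alpha_i}Q_\rho(\mathbb{X}_{M,\varepsilon})N^{\beta_i}$ and absorb by Young. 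This is precisely the route carried out in the paper through Lemma~\ref{lem:energy12} (energy identity), Lemma~\ref{lem:Z} (the $Z_\varepsilon$, $X_\varepsilon Y_\varepsilon$, $X_\varepsilon Y^2_\varepsilon$ bounds), Lemma~\ref{lemma:bounds-rhs1} (bounds on $\Theta_{\rho^4,\varepsilon}+\Psi_{\rho^4,\varepsilon}$), and the auxiliary estimate \eqref{eq:17} converting $\|\rho^2\phi\|_{H^{1-2\kappa,\varepsilon}}^2$ into the coercive $\psi$-norms. One small slip: the ansatz inversion should read $\phi=\psi-3\lambda\Q_\varepsilon^{-1}(\llbracket X^2\rrbracket\succ\phi)$, not $-3\Q_\varepsilon^{-1}(\ldots)$, and the stochastic objects in $Q_\rho$ carry no $\lambda$-normalization in the paper (the $\lambda$ powers live entirely in the coefficients).

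The genuine gap is your attribution of the $|\log t|^{4/(2+\theta)}$ factor. You claim the Schauder/fixed-point bound on $Y_{M,\varepsilon}$ "blows up only logarithmically as $t\downarrow 0$." It does not: Lemma~\ref{lem:Y1} gives a $C_T\CC^{1/2-\kappa,\varepsilon}(\rho^\sigma)$ bound on $Y_{M,\varepsilon}$ that is uniform in $t$, with no short-time singularity. The $|\log t|$ actually arises from renormalization constants, not from $Y$ itself. Unwinding $Y_{M,\varepsilon}$ inside $Z_\varepsilon$ produces the resonant product $\llbracket X^2_\varepsilon\rrbracket\circ\LL_\varepsilon^{-1}\llbracket X^2_\varepsilon\rrbracket$; since $\LL_\varepsilon^{-1}$ is the inverse with zero initial data at $t=0$, this requires the time-dependent counterterm $\tilde b_{M,\varepsilon}(t)$ rather than the stationary $b_{M,\varepsilon}$, and the paper records $|\tilde b_{M,\varepsilon}(t)-b_{M,\varepsilon}|\lesssim|\log t|$. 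The resulting term $3(\tilde b_{M,\varepsilon}-b_{M,\varepsilon})Y_{M,\varepsilon}$ appears explicitly in the definition \eqref{eq:def-Z} of $Z_\varepsilon$, is estimated in Lemma~\ref{lem:Z} as $|\log t|\,\|Y_\varepsilon\|_{C_T\CC^{1/2-\kappa,\varepsilon}}$, and the exponent $4/(2+\theta)$ emerges from the Young split of $\langle\rho^4\phi_\varepsilon,\lambda^2 Z_\varepsilon\rangle_\varepsilon$ in \eqref{eq:Z0}. Without this identification you would search fruitlessly for a small-time singularity in the $Y$ estimate and the book-keeping you describe would never close.
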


Here we observe the precise dependence on $\lambda$ which in particular
implies that the bound is uniform over $\lambda$ in every bounded subset of
$[0, \infty)$ and vanishes as $\lambda \rightarrow 0$.

\paragraph{Tightness.} In order to proceed to the proof of the existence of the
Euclidean $\Phi^4_3$ field theory, we shall employ the extension operator
$\mathcal{E}^{\varepsilon}$ from Section~\ref{s:ext} which permits to extend
discrete distributions to the full space $\mathbb{R}^3$. An additional twist
originates in the fact that by construction the process $Y_{M, \varepsilon}$
given by {\eqref{eq:Y1}} is not stationary and consequently also $\phi_{M,
\varepsilon}$ fails to be stationary. Therefore the energy argument as
explained above does not apply as it stands and we shall go back to the
stationary decomposition $\varphi_{M, \varepsilon} = X_{M, \varepsilon} -
\lambda X_{M, \varepsilon}^{\!\resizebox{0.6em}{!}{
\begin{tikzpicture}
\pgfpathmoveto{\pgfqpoint{0cm}{-0.035cm}}
\pgfpathlineto{\pgfqpoint{1.376cm}{-0.035cm}}
\pgfpathlineto{\pgfqpoint{1.376cm}{1.552cm}}
\pgfpathlineto{\pgfqpoint{0cm}{1.552cm}}
\pgfpathclose
\pgfusepath{clip}
\begin{pgfscope}
\begin{pgfscope}
\pgfpathmoveto{\pgfqpoint{0cm}{-0.035cm}}
\pgfpathlineto{\pgfqpoint{1.376cm}{-0.035cm}}
\pgfpathlineto{\pgfqpoint{1.376cm}{1.552cm}}
\pgfpathlineto{\pgfqpoint{0cm}{1.552cm}}
\pgfpathclose
\pgfusepath{clip}
\begin{pgfscope}
\begin{pgfscope}
\pgfsetdash{}{0cm}
\pgfsetlinewidth{0.818mm}
\pgfsetroundcap
\pgfsetroundjoin
\pgfsetmiterlimit{7.0}
\definecolor{eps2pgf_color}{gray}{0}\pgfsetstrokecolor{eps2pgf_color}\pgfsetfillcolor{eps2pgf_color}
\pgfpathmoveto{\pgfqpoint{0.117cm}{1.421cm}}
\pgfpathlineto{\pgfqpoint{0.682cm}{0.671cm}}
\pgfpathlineto{\pgfqpoint{1.246cm}{1.421cm}}
\pgfusepath{stroke}
\end{pgfscope}
\definecolor{eps2pgf_color}{gray}{0}\pgfsetstrokecolor{eps2pgf_color}\pgfsetfillcolor{eps2pgf_color}
\pgfpathmoveto{\pgfqpoint{0.273cm}{1.395cm}}
\pgfpathcurveto{\pgfqpoint{0.273cm}{1.432cm}}{\pgfqpoint{0.259cm}{1.467cm}}{\pgfqpoint{0.233cm}{1.492cm}}
\pgfpathcurveto{\pgfqpoint{0.207cm}{1.518cm}}{\pgfqpoint{0.173cm}{1.532cm}}{\pgfqpoint{0.137cm}{1.532cm}}
\pgfpathcurveto{\pgfqpoint{0.1cm}{1.532cm}}{\pgfqpoint{0.066cm}{1.518cm}}{\pgfqpoint{0.04cm}{1.492cm}}
\pgfpathcurveto{\pgfqpoint{0.014cm}{1.467cm}}{\pgfqpoint{0cm}{1.432cm}}{\pgfqpoint{0cm}{1.395cm}}
\pgfpathcurveto{\pgfqpoint{0cm}{1.359cm}}{\pgfqpoint{0.014cm}{1.324cm}}{\pgfqpoint{0.04cm}{1.299cm}}
\pgfpathcurveto{\pgfqpoint{0.066cm}{1.273cm}}{\pgfqpoint{0.1cm}{1.258cm}}{\pgfqpoint{0.137cm}{1.258cm}}
\pgfpathcurveto{\pgfqpoint{0.173cm}{1.258cm}}{\pgfqpoint{0.207cm}{1.273cm}}{\pgfqpoint{0.233cm}{1.299cm}}
\pgfpathcurveto{\pgfqpoint{0.259cm}{1.324cm}}{\pgfqpoint{0.273cm}{1.359cm}}{\pgfqpoint{0.273cm}{1.395cm}}
\pgfusepath{fill}
\begin{pgfscope}
\pgfsetdash{}{0cm}
\pgfsetlinewidth{0.818mm}
\pgfsetmiterlimit{7.0}
\pgfpathmoveto{\pgfqpoint{0.682cm}{0.671cm}}
\pgfpathlineto{\pgfqpoint{0.679cm}{1.418cm}}
\pgfusepath{stroke}
\end{pgfscope}
\pgfpathmoveto{\pgfqpoint{0.815cm}{1.399cm}}
\pgfpathcurveto{\pgfqpoint{0.815cm}{1.435cm}}{\pgfqpoint{0.801cm}{1.47cm}}{\pgfqpoint{0.775cm}{1.496cm}}
\pgfpathcurveto{\pgfqpoint{0.75cm}{1.521cm}}{\pgfqpoint{0.715cm}{1.536cm}}{\pgfqpoint{0.679cm}{1.536cm}}
\pgfpathcurveto{\pgfqpoint{0.643cm}{1.536cm}}{\pgfqpoint{0.608cm}{1.521cm}}{\pgfqpoint{0.582cm}{1.496cm}}
\pgfpathcurveto{\pgfqpoint{0.557cm}{1.47cm}}{\pgfqpoint{0.542cm}{1.435cm}}{\pgfqpoint{0.542cm}{1.399cm}}
\pgfpathcurveto{\pgfqpoint{0.542cm}{1.363cm}}{\pgfqpoint{0.557cm}{1.328cm}}{\pgfqpoint{0.582cm}{1.302cm}}
\pgfpathcurveto{\pgfqpoint{0.608cm}{1.276cm}}{\pgfqpoint{0.643cm}{1.262cm}}{\pgfqpoint{0.679cm}{1.262cm}}
\pgfpathcurveto{\pgfqpoint{0.715cm}{1.262cm}}{\pgfqpoint{0.75cm}{1.276cm}}{\pgfqpoint{0.775cm}{1.302cm}}
\pgfpathcurveto{\pgfqpoint{0.801cm}{1.328cm}}{\pgfqpoint{0.815cm}{1.363cm}}{\pgfqpoint{0.815cm}{1.399cm}}
\pgfusepath{fill}
\pgfpathmoveto{\pgfqpoint{1.345cm}{1.371cm}}
\pgfpathcurveto{\pgfqpoint{1.345cm}{1.408cm}}{\pgfqpoint{1.331cm}{1.442cm}}{\pgfqpoint{1.305cm}{1.468cm}}
\pgfpathcurveto{\pgfqpoint{1.28cm}{1.494cm}}{\pgfqpoint{1.245cm}{1.508cm}}{\pgfqpoint{1.209cm}{1.508cm}}
\pgfpathcurveto{\pgfqpoint{1.172cm}{1.508cm}}{\pgfqpoint{1.138cm}{1.494cm}}{\pgfqpoint{1.112cm}{1.468cm}}
\pgfpathcurveto{\pgfqpoint{1.087cm}{1.442cm}}{\pgfqpoint{1.072cm}{1.408cm}}{\pgfqpoint{1.072cm}{1.371cm}}
\pgfpathcurveto{\pgfqpoint{1.072cm}{1.335cm}}{\pgfqpoint{1.087cm}{1.3cm}}{\pgfqpoint{1.112cm}{1.274cm}}
\pgfpathcurveto{\pgfqpoint{1.138cm}{1.249cm}}{\pgfqpoint{1.172cm}{1.234cm}}{\pgfqpoint{1.209cm}{1.234cm}}
\pgfpathcurveto{\pgfqpoint{1.245cm}{1.234cm}}{\pgfqpoint{1.28cm}{1.249cm}}{\pgfqpoint{1.305cm}{1.274cm}}
\pgfpathcurveto{\pgfqpoint{1.331cm}{1.3cm}}{\pgfqpoint{1.345cm}{1.335cm}}{\pgfqpoint{1.345cm}{1.371cm}}
\pgfusepath{fill}
\begin{pgfscope}
\pgfsetdash{}{0cm}
\pgfsetlinewidth{0.818mm}
\pgfsetroundcap
\pgfsetmiterlimit{4.0}
\pgfpathmoveto{\pgfqpoint{0.682cm}{0.671cm}}
\pgfpathlineto{\pgfqpoint{0.682cm}{0.042cm}}
\pgfusepath{stroke}
\end{pgfscope}
\end{pgfscope}
\end{pgfscope}
\end{pgfscope}
\end{tikzpicture}}} + \zeta_{M, \varepsilon}$, while using
the result of Theorem \ref{th:energy-estimate-int} in order to estimate
$\zeta_{M, \varepsilon}$. Consequently, we deduce tightness of the family of
the joint laws of $( \mathcal{E}^{\varepsilon}\varphi_{M, \varepsilon},\mathcal{E}^{\varepsilon} X_{M, \varepsilon},
\mathcal{E}^{\varepsilon}X^{\!\resizebox{0.6em}{!}{
\begin{tikzpicture}
\pgfpathmoveto{\pgfqpoint{0cm}{-0.035cm}}
\pgfpathlineto{\pgfqpoint{1.376cm}{-0.035cm}}
\pgfpathlineto{\pgfqpoint{1.376cm}{1.552cm}}
\pgfpathlineto{\pgfqpoint{0cm}{1.552cm}}
\pgfpathclose
\pgfusepath{clip}
\begin{pgfscope}
\begin{pgfscope}
\pgfpathmoveto{\pgfqpoint{0cm}{-0.035cm}}
\pgfpathlineto{\pgfqpoint{1.376cm}{-0.035cm}}
\pgfpathlineto{\pgfqpoint{1.376cm}{1.552cm}}
\pgfpathlineto{\pgfqpoint{0cm}{1.552cm}}
\pgfpathclose
\pgfusepath{clip}
\begin{pgfscope}
\begin{pgfscope}
\pgfsetdash{}{0cm}
\pgfsetlinewidth{0.818mm}
\pgfsetroundcap
\pgfsetroundjoin
\pgfsetmiterlimit{7.0}
\definecolor{eps2pgf_color}{gray}{0}\pgfsetstrokecolor{eps2pgf_color}\pgfsetfillcolor{eps2pgf_color}
\pgfpathmoveto{\pgfqpoint{0.117cm}{1.421cm}}
\pgfpathlineto{\pgfqpoint{0.682cm}{0.671cm}}
\pgfpathlineto{\pgfqpoint{1.246cm}{1.421cm}}
\pgfusepath{stroke}
\end{pgfscope}
\definecolor{eps2pgf_color}{gray}{0}\pgfsetstrokecolor{eps2pgf_color}\pgfsetfillcolor{eps2pgf_color}
\pgfpathmoveto{\pgfqpoint{0.273cm}{1.395cm}}
\pgfpathcurveto{\pgfqpoint{0.273cm}{1.432cm}}{\pgfqpoint{0.259cm}{1.467cm}}{\pgfqpoint{0.233cm}{1.492cm}}
\pgfpathcurveto{\pgfqpoint{0.207cm}{1.518cm}}{\pgfqpoint{0.173cm}{1.532cm}}{\pgfqpoint{0.137cm}{1.532cm}}
\pgfpathcurveto{\pgfqpoint{0.1cm}{1.532cm}}{\pgfqpoint{0.066cm}{1.518cm}}{\pgfqpoint{0.04cm}{1.492cm}}
\pgfpathcurveto{\pgfqpoint{0.014cm}{1.467cm}}{\pgfqpoint{0cm}{1.432cm}}{\pgfqpoint{0cm}{1.395cm}}
\pgfpathcurveto{\pgfqpoint{0cm}{1.359cm}}{\pgfqpoint{0.014cm}{1.324cm}}{\pgfqpoint{0.04cm}{1.299cm}}
\pgfpathcurveto{\pgfqpoint{0.066cm}{1.273cm}}{\pgfqpoint{0.1cm}{1.258cm}}{\pgfqpoint{0.137cm}{1.258cm}}
\pgfpathcurveto{\pgfqpoint{0.173cm}{1.258cm}}{\pgfqpoint{0.207cm}{1.273cm}}{\pgfqpoint{0.233cm}{1.299cm}}
\pgfpathcurveto{\pgfqpoint{0.259cm}{1.324cm}}{\pgfqpoint{0.273cm}{1.359cm}}{\pgfqpoint{0.273cm}{1.395cm}}
\pgfusepath{fill}
\begin{pgfscope}
\pgfsetdash{}{0cm}
\pgfsetlinewidth{0.818mm}
\pgfsetmiterlimit{7.0}
\pgfpathmoveto{\pgfqpoint{0.682cm}{0.671cm}}
\pgfpathlineto{\pgfqpoint{0.679cm}{1.418cm}}
\pgfusepath{stroke}
\end{pgfscope}
\pgfpathmoveto{\pgfqpoint{0.815cm}{1.399cm}}
\pgfpathcurveto{\pgfqpoint{0.815cm}{1.435cm}}{\pgfqpoint{0.801cm}{1.47cm}}{\pgfqpoint{0.775cm}{1.496cm}}
\pgfpathcurveto{\pgfqpoint{0.75cm}{1.521cm}}{\pgfqpoint{0.715cm}{1.536cm}}{\pgfqpoint{0.679cm}{1.536cm}}
\pgfpathcurveto{\pgfqpoint{0.643cm}{1.536cm}}{\pgfqpoint{0.608cm}{1.521cm}}{\pgfqpoint{0.582cm}{1.496cm}}
\pgfpathcurveto{\pgfqpoint{0.557cm}{1.47cm}}{\pgfqpoint{0.542cm}{1.435cm}}{\pgfqpoint{0.542cm}{1.399cm}}
\pgfpathcurveto{\pgfqpoint{0.542cm}{1.363cm}}{\pgfqpoint{0.557cm}{1.328cm}}{\pgfqpoint{0.582cm}{1.302cm}}
\pgfpathcurveto{\pgfqpoint{0.608cm}{1.276cm}}{\pgfqpoint{0.643cm}{1.262cm}}{\pgfqpoint{0.679cm}{1.262cm}}
\pgfpathcurveto{\pgfqpoint{0.715cm}{1.262cm}}{\pgfqpoint{0.75cm}{1.276cm}}{\pgfqpoint{0.775cm}{1.302cm}}
\pgfpathcurveto{\pgfqpoint{0.801cm}{1.328cm}}{\pgfqpoint{0.815cm}{1.363cm}}{\pgfqpoint{0.815cm}{1.399cm}}
\pgfusepath{fill}
\pgfpathmoveto{\pgfqpoint{1.345cm}{1.371cm}}
\pgfpathcurveto{\pgfqpoint{1.345cm}{1.408cm}}{\pgfqpoint{1.331cm}{1.442cm}}{\pgfqpoint{1.305cm}{1.468cm}}
\pgfpathcurveto{\pgfqpoint{1.28cm}{1.494cm}}{\pgfqpoint{1.245cm}{1.508cm}}{\pgfqpoint{1.209cm}{1.508cm}}
\pgfpathcurveto{\pgfqpoint{1.172cm}{1.508cm}}{\pgfqpoint{1.138cm}{1.494cm}}{\pgfqpoint{1.112cm}{1.468cm}}
\pgfpathcurveto{\pgfqpoint{1.087cm}{1.442cm}}{\pgfqpoint{1.072cm}{1.408cm}}{\pgfqpoint{1.072cm}{1.371cm}}
\pgfpathcurveto{\pgfqpoint{1.072cm}{1.335cm}}{\pgfqpoint{1.087cm}{1.3cm}}{\pgfqpoint{1.112cm}{1.274cm}}
\pgfpathcurveto{\pgfqpoint{1.138cm}{1.249cm}}{\pgfqpoint{1.172cm}{1.234cm}}{\pgfqpoint{1.209cm}{1.234cm}}
\pgfpathcurveto{\pgfqpoint{1.245cm}{1.234cm}}{\pgfqpoint{1.28cm}{1.249cm}}{\pgfqpoint{1.305cm}{1.274cm}}
\pgfpathcurveto{\pgfqpoint{1.331cm}{1.3cm}}{\pgfqpoint{1.345cm}{1.335cm}}{\pgfqpoint{1.345cm}{1.371cm}}
\pgfusepath{fill}
\begin{pgfscope}
\pgfsetdash{}{0cm}
\pgfsetlinewidth{0.818mm}
\pgfsetroundcap
\pgfsetmiterlimit{4.0}
\pgfpathmoveto{\pgfqpoint{0.682cm}{0.671cm}}
\pgfpathlineto{\pgfqpoint{0.682cm}{0.042cm}}
\pgfusepath{stroke}
\end{pgfscope}
\end{pgfscope}
\end{pgfscope}
\end{pgfscope}
\end{tikzpicture}}}_{M, \varepsilon} )$ evaluated at any fixed time $t
\geqslant 0$, proven in Theorem \ref{thm:main} below. To this end, we denote
by $(\varphi, X, X^{\!\resizebox{0.6em}{!}{
\begin{tikzpicture}
\pgfpathmoveto{\pgfqpoint{0cm}{-0.035cm}}
\pgfpathlineto{\pgfqpoint{1.376cm}{-0.035cm}}
\pgfpathlineto{\pgfqpoint{1.376cm}{1.552cm}}
\pgfpathlineto{\pgfqpoint{0cm}{1.552cm}}
\pgfpathclose
\pgfusepath{clip}
\begin{pgfscope}
\begin{pgfscope}
\pgfpathmoveto{\pgfqpoint{0cm}{-0.035cm}}
\pgfpathlineto{\pgfqpoint{1.376cm}{-0.035cm}}
\pgfpathlineto{\pgfqpoint{1.376cm}{1.552cm}}
\pgfpathlineto{\pgfqpoint{0cm}{1.552cm}}
\pgfpathclose
\pgfusepath{clip}
\begin{pgfscope}
\begin{pgfscope}
\pgfsetdash{}{0cm}
\pgfsetlinewidth{0.818mm}
\pgfsetroundcap
\pgfsetroundjoin
\pgfsetmiterlimit{7.0}
\definecolor{eps2pgf_color}{gray}{0}\pgfsetstrokecolor{eps2pgf_color}\pgfsetfillcolor{eps2pgf_color}
\pgfpathmoveto{\pgfqpoint{0.117cm}{1.421cm}}
\pgfpathlineto{\pgfqpoint{0.682cm}{0.671cm}}
\pgfpathlineto{\pgfqpoint{1.246cm}{1.421cm}}
\pgfusepath{stroke}
\end{pgfscope}
\definecolor{eps2pgf_color}{gray}{0}\pgfsetstrokecolor{eps2pgf_color}\pgfsetfillcolor{eps2pgf_color}
\pgfpathmoveto{\pgfqpoint{0.273cm}{1.395cm}}
\pgfpathcurveto{\pgfqpoint{0.273cm}{1.432cm}}{\pgfqpoint{0.259cm}{1.467cm}}{\pgfqpoint{0.233cm}{1.492cm}}
\pgfpathcurveto{\pgfqpoint{0.207cm}{1.518cm}}{\pgfqpoint{0.173cm}{1.532cm}}{\pgfqpoint{0.137cm}{1.532cm}}
\pgfpathcurveto{\pgfqpoint{0.1cm}{1.532cm}}{\pgfqpoint{0.066cm}{1.518cm}}{\pgfqpoint{0.04cm}{1.492cm}}
\pgfpathcurveto{\pgfqpoint{0.014cm}{1.467cm}}{\pgfqpoint{0cm}{1.432cm}}{\pgfqpoint{0cm}{1.395cm}}
\pgfpathcurveto{\pgfqpoint{0cm}{1.359cm}}{\pgfqpoint{0.014cm}{1.324cm}}{\pgfqpoint{0.04cm}{1.299cm}}
\pgfpathcurveto{\pgfqpoint{0.066cm}{1.273cm}}{\pgfqpoint{0.1cm}{1.258cm}}{\pgfqpoint{0.137cm}{1.258cm}}
\pgfpathcurveto{\pgfqpoint{0.173cm}{1.258cm}}{\pgfqpoint{0.207cm}{1.273cm}}{\pgfqpoint{0.233cm}{1.299cm}}
\pgfpathcurveto{\pgfqpoint{0.259cm}{1.324cm}}{\pgfqpoint{0.273cm}{1.359cm}}{\pgfqpoint{0.273cm}{1.395cm}}
\pgfusepath{fill}
\begin{pgfscope}
\pgfsetdash{}{0cm}
\pgfsetlinewidth{0.818mm}
\pgfsetmiterlimit{7.0}
\pgfpathmoveto{\pgfqpoint{0.682cm}{0.671cm}}
\pgfpathlineto{\pgfqpoint{0.679cm}{1.418cm}}
\pgfusepath{stroke}
\end{pgfscope}
\pgfpathmoveto{\pgfqpoint{0.815cm}{1.399cm}}
\pgfpathcurveto{\pgfqpoint{0.815cm}{1.435cm}}{\pgfqpoint{0.801cm}{1.47cm}}{\pgfqpoint{0.775cm}{1.496cm}}
\pgfpathcurveto{\pgfqpoint{0.75cm}{1.521cm}}{\pgfqpoint{0.715cm}{1.536cm}}{\pgfqpoint{0.679cm}{1.536cm}}
\pgfpathcurveto{\pgfqpoint{0.643cm}{1.536cm}}{\pgfqpoint{0.608cm}{1.521cm}}{\pgfqpoint{0.582cm}{1.496cm}}
\pgfpathcurveto{\pgfqpoint{0.557cm}{1.47cm}}{\pgfqpoint{0.542cm}{1.435cm}}{\pgfqpoint{0.542cm}{1.399cm}}
\pgfpathcurveto{\pgfqpoint{0.542cm}{1.363cm}}{\pgfqpoint{0.557cm}{1.328cm}}{\pgfqpoint{0.582cm}{1.302cm}}
\pgfpathcurveto{\pgfqpoint{0.608cm}{1.276cm}}{\pgfqpoint{0.643cm}{1.262cm}}{\pgfqpoint{0.679cm}{1.262cm}}
\pgfpathcurveto{\pgfqpoint{0.715cm}{1.262cm}}{\pgfqpoint{0.75cm}{1.276cm}}{\pgfqpoint{0.775cm}{1.302cm}}
\pgfpathcurveto{\pgfqpoint{0.801cm}{1.328cm}}{\pgfqpoint{0.815cm}{1.363cm}}{\pgfqpoint{0.815cm}{1.399cm}}
\pgfusepath{fill}
\pgfpathmoveto{\pgfqpoint{1.345cm}{1.371cm}}
\pgfpathcurveto{\pgfqpoint{1.345cm}{1.408cm}}{\pgfqpoint{1.331cm}{1.442cm}}{\pgfqpoint{1.305cm}{1.468cm}}
\pgfpathcurveto{\pgfqpoint{1.28cm}{1.494cm}}{\pgfqpoint{1.245cm}{1.508cm}}{\pgfqpoint{1.209cm}{1.508cm}}
\pgfpathcurveto{\pgfqpoint{1.172cm}{1.508cm}}{\pgfqpoint{1.138cm}{1.494cm}}{\pgfqpoint{1.112cm}{1.468cm}}
\pgfpathcurveto{\pgfqpoint{1.087cm}{1.442cm}}{\pgfqpoint{1.072cm}{1.408cm}}{\pgfqpoint{1.072cm}{1.371cm}}
\pgfpathcurveto{\pgfqpoint{1.072cm}{1.335cm}}{\pgfqpoint{1.087cm}{1.3cm}}{\pgfqpoint{1.112cm}{1.274cm}}
\pgfpathcurveto{\pgfqpoint{1.138cm}{1.249cm}}{\pgfqpoint{1.172cm}{1.234cm}}{\pgfqpoint{1.209cm}{1.234cm}}
\pgfpathcurveto{\pgfqpoint{1.245cm}{1.234cm}}{\pgfqpoint{1.28cm}{1.249cm}}{\pgfqpoint{1.305cm}{1.274cm}}
\pgfpathcurveto{\pgfqpoint{1.331cm}{1.3cm}}{\pgfqpoint{1.345cm}{1.335cm}}{\pgfqpoint{1.345cm}{1.371cm}}
\pgfusepath{fill}
\begin{pgfscope}
\pgfsetdash{}{0cm}
\pgfsetlinewidth{0.818mm}
\pgfsetroundcap
\pgfsetmiterlimit{4.0}
\pgfpathmoveto{\pgfqpoint{0.682cm}{0.671cm}}
\pgfpathlineto{\pgfqpoint{0.682cm}{0.042cm}}
\pgfusepath{stroke}
\end{pgfscope}
\end{pgfscope}
\end{pgfscope}
\end{pgfscope}
\end{tikzpicture}}})$ a canonical representative of the random
variables under consideration and let $\zeta \assign \varphi- X + \lambda
X^{\!\resizebox{0.6em}{!}{
\begin{tikzpicture}
\pgfpathmoveto{\pgfqpoint{0cm}{-0.035cm}}
\pgfpathlineto{\pgfqpoint{1.376cm}{-0.035cm}}
\pgfpathlineto{\pgfqpoint{1.376cm}{1.552cm}}
\pgfpathlineto{\pgfqpoint{0cm}{1.552cm}}
\pgfpathclose
\pgfusepath{clip}
\begin{pgfscope}
\begin{pgfscope}
\pgfpathmoveto{\pgfqpoint{0cm}{-0.035cm}}
\pgfpathlineto{\pgfqpoint{1.376cm}{-0.035cm}}
\pgfpathlineto{\pgfqpoint{1.376cm}{1.552cm}}
\pgfpathlineto{\pgfqpoint{0cm}{1.552cm}}
\pgfpathclose
\pgfusepath{clip}
\begin{pgfscope}
\begin{pgfscope}
\pgfsetdash{}{0cm}
\pgfsetlinewidth{0.818mm}
\pgfsetroundcap
\pgfsetroundjoin
\pgfsetmiterlimit{7.0}
\definecolor{eps2pgf_color}{gray}{0}\pgfsetstrokecolor{eps2pgf_color}\pgfsetfillcolor{eps2pgf_color}
\pgfpathmoveto{\pgfqpoint{0.117cm}{1.421cm}}
\pgfpathlineto{\pgfqpoint{0.682cm}{0.671cm}}
\pgfpathlineto{\pgfqpoint{1.246cm}{1.421cm}}
\pgfusepath{stroke}
\end{pgfscope}
\definecolor{eps2pgf_color}{gray}{0}\pgfsetstrokecolor{eps2pgf_color}\pgfsetfillcolor{eps2pgf_color}
\pgfpathmoveto{\pgfqpoint{0.273cm}{1.395cm}}
\pgfpathcurveto{\pgfqpoint{0.273cm}{1.432cm}}{\pgfqpoint{0.259cm}{1.467cm}}{\pgfqpoint{0.233cm}{1.492cm}}
\pgfpathcurveto{\pgfqpoint{0.207cm}{1.518cm}}{\pgfqpoint{0.173cm}{1.532cm}}{\pgfqpoint{0.137cm}{1.532cm}}
\pgfpathcurveto{\pgfqpoint{0.1cm}{1.532cm}}{\pgfqpoint{0.066cm}{1.518cm}}{\pgfqpoint{0.04cm}{1.492cm}}
\pgfpathcurveto{\pgfqpoint{0.014cm}{1.467cm}}{\pgfqpoint{0cm}{1.432cm}}{\pgfqpoint{0cm}{1.395cm}}
\pgfpathcurveto{\pgfqpoint{0cm}{1.359cm}}{\pgfqpoint{0.014cm}{1.324cm}}{\pgfqpoint{0.04cm}{1.299cm}}
\pgfpathcurveto{\pgfqpoint{0.066cm}{1.273cm}}{\pgfqpoint{0.1cm}{1.258cm}}{\pgfqpoint{0.137cm}{1.258cm}}
\pgfpathcurveto{\pgfqpoint{0.173cm}{1.258cm}}{\pgfqpoint{0.207cm}{1.273cm}}{\pgfqpoint{0.233cm}{1.299cm}}
\pgfpathcurveto{\pgfqpoint{0.259cm}{1.324cm}}{\pgfqpoint{0.273cm}{1.359cm}}{\pgfqpoint{0.273cm}{1.395cm}}
\pgfusepath{fill}
\begin{pgfscope}
\pgfsetdash{}{0cm}
\pgfsetlinewidth{0.818mm}
\pgfsetmiterlimit{7.0}
\pgfpathmoveto{\pgfqpoint{0.682cm}{0.671cm}}
\pgfpathlineto{\pgfqpoint{0.679cm}{1.418cm}}
\pgfusepath{stroke}
\end{pgfscope}
\pgfpathmoveto{\pgfqpoint{0.815cm}{1.399cm}}
\pgfpathcurveto{\pgfqpoint{0.815cm}{1.435cm}}{\pgfqpoint{0.801cm}{1.47cm}}{\pgfqpoint{0.775cm}{1.496cm}}
\pgfpathcurveto{\pgfqpoint{0.75cm}{1.521cm}}{\pgfqpoint{0.715cm}{1.536cm}}{\pgfqpoint{0.679cm}{1.536cm}}
\pgfpathcurveto{\pgfqpoint{0.643cm}{1.536cm}}{\pgfqpoint{0.608cm}{1.521cm}}{\pgfqpoint{0.582cm}{1.496cm}}
\pgfpathcurveto{\pgfqpoint{0.557cm}{1.47cm}}{\pgfqpoint{0.542cm}{1.435cm}}{\pgfqpoint{0.542cm}{1.399cm}}
\pgfpathcurveto{\pgfqpoint{0.542cm}{1.363cm}}{\pgfqpoint{0.557cm}{1.328cm}}{\pgfqpoint{0.582cm}{1.302cm}}
\pgfpathcurveto{\pgfqpoint{0.608cm}{1.276cm}}{\pgfqpoint{0.643cm}{1.262cm}}{\pgfqpoint{0.679cm}{1.262cm}}
\pgfpathcurveto{\pgfqpoint{0.715cm}{1.262cm}}{\pgfqpoint{0.75cm}{1.276cm}}{\pgfqpoint{0.775cm}{1.302cm}}
\pgfpathcurveto{\pgfqpoint{0.801cm}{1.328cm}}{\pgfqpoint{0.815cm}{1.363cm}}{\pgfqpoint{0.815cm}{1.399cm}}
\pgfusepath{fill}
\pgfpathmoveto{\pgfqpoint{1.345cm}{1.371cm}}
\pgfpathcurveto{\pgfqpoint{1.345cm}{1.408cm}}{\pgfqpoint{1.331cm}{1.442cm}}{\pgfqpoint{1.305cm}{1.468cm}}
\pgfpathcurveto{\pgfqpoint{1.28cm}{1.494cm}}{\pgfqpoint{1.245cm}{1.508cm}}{\pgfqpoint{1.209cm}{1.508cm}}
\pgfpathcurveto{\pgfqpoint{1.172cm}{1.508cm}}{\pgfqpoint{1.138cm}{1.494cm}}{\pgfqpoint{1.112cm}{1.468cm}}
\pgfpathcurveto{\pgfqpoint{1.087cm}{1.442cm}}{\pgfqpoint{1.072cm}{1.408cm}}{\pgfqpoint{1.072cm}{1.371cm}}
\pgfpathcurveto{\pgfqpoint{1.072cm}{1.335cm}}{\pgfqpoint{1.087cm}{1.3cm}}{\pgfqpoint{1.112cm}{1.274cm}}
\pgfpathcurveto{\pgfqpoint{1.138cm}{1.249cm}}{\pgfqpoint{1.172cm}{1.234cm}}{\pgfqpoint{1.209cm}{1.234cm}}
\pgfpathcurveto{\pgfqpoint{1.245cm}{1.234cm}}{\pgfqpoint{1.28cm}{1.249cm}}{\pgfqpoint{1.305cm}{1.274cm}}
\pgfpathcurveto{\pgfqpoint{1.331cm}{1.3cm}}{\pgfqpoint{1.345cm}{1.335cm}}{\pgfqpoint{1.345cm}{1.371cm}}
\pgfusepath{fill}
\begin{pgfscope}
\pgfsetdash{}{0cm}
\pgfsetlinewidth{0.818mm}
\pgfsetroundcap
\pgfsetmiterlimit{4.0}
\pgfpathmoveto{\pgfqpoint{0.682cm}{0.671cm}}
\pgfpathlineto{\pgfqpoint{0.682cm}{0.042cm}}
\pgfusepath{stroke}
\end{pgfscope}
\end{pgfscope}
\end{pgfscope}
\end{pgfscope}
\end{tikzpicture}}}$.

\begin{theorem}
  \label{thm:main-int}Let $\rho$ be a weight such that $\rho^{\iota} \in L^{4,
  0}$ for some $\iota \in (0, 1)$. Then the family of joint laws of $(
  \mathcal{E}^{\varepsilon} \varphi_{M, \varepsilon},
  \mathcal{E}^{\varepsilon} X_{M, \varepsilon}, \mathcal{E}^{\varepsilon}
  X^{\!\resizebox{0.6em}{!}{
\begin{tikzpicture}
\pgfpathmoveto{\pgfqpoint{0cm}{-0.035cm}}
\pgfpathlineto{\pgfqpoint{1.376cm}{-0.035cm}}
\pgfpathlineto{\pgfqpoint{1.376cm}{1.552cm}}
\pgfpathlineto{\pgfqpoint{0cm}{1.552cm}}
\pgfpathclose
\pgfusepath{clip}
\begin{pgfscope}
\begin{pgfscope}
\pgfpathmoveto{\pgfqpoint{0cm}{-0.035cm}}
\pgfpathlineto{\pgfqpoint{1.376cm}{-0.035cm}}
\pgfpathlineto{\pgfqpoint{1.376cm}{1.552cm}}
\pgfpathlineto{\pgfqpoint{0cm}{1.552cm}}
\pgfpathclose
\pgfusepath{clip}
\begin{pgfscope}
\begin{pgfscope}
\pgfsetdash{}{0cm}
\pgfsetlinewidth{0.818mm}
\pgfsetroundcap
\pgfsetroundjoin
\pgfsetmiterlimit{7.0}
\definecolor{eps2pgf_color}{gray}{0}\pgfsetstrokecolor{eps2pgf_color}\pgfsetfillcolor{eps2pgf_color}
\pgfpathmoveto{\pgfqpoint{0.117cm}{1.421cm}}
\pgfpathlineto{\pgfqpoint{0.682cm}{0.671cm}}
\pgfpathlineto{\pgfqpoint{1.246cm}{1.421cm}}
\pgfusepath{stroke}
\end{pgfscope}
\definecolor{eps2pgf_color}{gray}{0}\pgfsetstrokecolor{eps2pgf_color}\pgfsetfillcolor{eps2pgf_color}
\pgfpathmoveto{\pgfqpoint{0.273cm}{1.395cm}}
\pgfpathcurveto{\pgfqpoint{0.273cm}{1.432cm}}{\pgfqpoint{0.259cm}{1.467cm}}{\pgfqpoint{0.233cm}{1.492cm}}
\pgfpathcurveto{\pgfqpoint{0.207cm}{1.518cm}}{\pgfqpoint{0.173cm}{1.532cm}}{\pgfqpoint{0.137cm}{1.532cm}}
\pgfpathcurveto{\pgfqpoint{0.1cm}{1.532cm}}{\pgfqpoint{0.066cm}{1.518cm}}{\pgfqpoint{0.04cm}{1.492cm}}
\pgfpathcurveto{\pgfqpoint{0.014cm}{1.467cm}}{\pgfqpoint{0cm}{1.432cm}}{\pgfqpoint{0cm}{1.395cm}}
\pgfpathcurveto{\pgfqpoint{0cm}{1.359cm}}{\pgfqpoint{0.014cm}{1.324cm}}{\pgfqpoint{0.04cm}{1.299cm}}
\pgfpathcurveto{\pgfqpoint{0.066cm}{1.273cm}}{\pgfqpoint{0.1cm}{1.258cm}}{\pgfqpoint{0.137cm}{1.258cm}}
\pgfpathcurveto{\pgfqpoint{0.173cm}{1.258cm}}{\pgfqpoint{0.207cm}{1.273cm}}{\pgfqpoint{0.233cm}{1.299cm}}
\pgfpathcurveto{\pgfqpoint{0.259cm}{1.324cm}}{\pgfqpoint{0.273cm}{1.359cm}}{\pgfqpoint{0.273cm}{1.395cm}}
\pgfusepath{fill}
\begin{pgfscope}
\pgfsetdash{}{0cm}
\pgfsetlinewidth{0.818mm}
\pgfsetmiterlimit{7.0}
\pgfpathmoveto{\pgfqpoint{0.682cm}{0.671cm}}
\pgfpathlineto{\pgfqpoint{0.679cm}{1.418cm}}
\pgfusepath{stroke}
\end{pgfscope}
\pgfpathmoveto{\pgfqpoint{0.815cm}{1.399cm}}
\pgfpathcurveto{\pgfqpoint{0.815cm}{1.435cm}}{\pgfqpoint{0.801cm}{1.47cm}}{\pgfqpoint{0.775cm}{1.496cm}}
\pgfpathcurveto{\pgfqpoint{0.75cm}{1.521cm}}{\pgfqpoint{0.715cm}{1.536cm}}{\pgfqpoint{0.679cm}{1.536cm}}
\pgfpathcurveto{\pgfqpoint{0.643cm}{1.536cm}}{\pgfqpoint{0.608cm}{1.521cm}}{\pgfqpoint{0.582cm}{1.496cm}}
\pgfpathcurveto{\pgfqpoint{0.557cm}{1.47cm}}{\pgfqpoint{0.542cm}{1.435cm}}{\pgfqpoint{0.542cm}{1.399cm}}
\pgfpathcurveto{\pgfqpoint{0.542cm}{1.363cm}}{\pgfqpoint{0.557cm}{1.328cm}}{\pgfqpoint{0.582cm}{1.302cm}}
\pgfpathcurveto{\pgfqpoint{0.608cm}{1.276cm}}{\pgfqpoint{0.643cm}{1.262cm}}{\pgfqpoint{0.679cm}{1.262cm}}
\pgfpathcurveto{\pgfqpoint{0.715cm}{1.262cm}}{\pgfqpoint{0.75cm}{1.276cm}}{\pgfqpoint{0.775cm}{1.302cm}}
\pgfpathcurveto{\pgfqpoint{0.801cm}{1.328cm}}{\pgfqpoint{0.815cm}{1.363cm}}{\pgfqpoint{0.815cm}{1.399cm}}
\pgfusepath{fill}
\pgfpathmoveto{\pgfqpoint{1.345cm}{1.371cm}}
\pgfpathcurveto{\pgfqpoint{1.345cm}{1.408cm}}{\pgfqpoint{1.331cm}{1.442cm}}{\pgfqpoint{1.305cm}{1.468cm}}
\pgfpathcurveto{\pgfqpoint{1.28cm}{1.494cm}}{\pgfqpoint{1.245cm}{1.508cm}}{\pgfqpoint{1.209cm}{1.508cm}}
\pgfpathcurveto{\pgfqpoint{1.172cm}{1.508cm}}{\pgfqpoint{1.138cm}{1.494cm}}{\pgfqpoint{1.112cm}{1.468cm}}
\pgfpathcurveto{\pgfqpoint{1.087cm}{1.442cm}}{\pgfqpoint{1.072cm}{1.408cm}}{\pgfqpoint{1.072cm}{1.371cm}}
\pgfpathcurveto{\pgfqpoint{1.072cm}{1.335cm}}{\pgfqpoint{1.087cm}{1.3cm}}{\pgfqpoint{1.112cm}{1.274cm}}
\pgfpathcurveto{\pgfqpoint{1.138cm}{1.249cm}}{\pgfqpoint{1.172cm}{1.234cm}}{\pgfqpoint{1.209cm}{1.234cm}}
\pgfpathcurveto{\pgfqpoint{1.245cm}{1.234cm}}{\pgfqpoint{1.28cm}{1.249cm}}{\pgfqpoint{1.305cm}{1.274cm}}
\pgfpathcurveto{\pgfqpoint{1.331cm}{1.3cm}}{\pgfqpoint{1.345cm}{1.335cm}}{\pgfqpoint{1.345cm}{1.371cm}}
\pgfusepath{fill}
\begin{pgfscope}
\pgfsetdash{}{0cm}
\pgfsetlinewidth{0.818mm}
\pgfsetroundcap
\pgfsetmiterlimit{4.0}
\pgfpathmoveto{\pgfqpoint{0.682cm}{0.671cm}}
\pgfpathlineto{\pgfqpoint{0.682cm}{0.042cm}}
\pgfusepath{stroke}
\end{pgfscope}
\end{pgfscope}
\end{pgfscope}
\end{pgfscope}
\end{tikzpicture}}}_{M, \varepsilon} )$, $\varepsilon \in \mathcal{A}$, $M >
  0$, evaluated at an arbitrary time $t \geqslant 0$ is tight. Moreover, any
  limit measure $\mu$ satisfies for all $p \in [1, \infty)$
  \[ \mathbb{E}_{\mu} \| \varphi \|_{H^{- 1 / 2 - 2 \kappa} (\rho^2)}^{2 p}
     \lesssim 1 + \lambda^{3 p}, \qquad \mathbb{E}_{\mu} \| \zeta \|_{L^2
     (\rho^2)}^{2 p} \lesssim  \lambda^p + \lambda^{3p+4} + \lambda^{4p}, \]
  \[ \mathbb{E}_{\mu} \| \zeta \|_{H^{1 - 2 \kappa} (\rho^2)}^2 \lesssim
     \lambda^2 + \lambda^7, \qquad \mathbb{E}_{\mu} \| \zeta \|_{B^0_{4,
     \infty} (\rho)}^4 \lesssim \lambda + \lambda^6 . \]
\end{theorem}

\paragraph{Osterwalder--Schrader axioms.}The projection of a limit measure
$\mu$ onto the first component is the candidate $\Phi^4_3$ measure and we
denote it by $\nu$. Based on Theorem \ref{thm:main-int} we are able to show
that $\nu$ is translation invariant and reflection positive, see Section \ref{ss:OS1} and Section \ref{ss:OS2}. In addition, we prove that the measure is
non-Gaussian. To this end, we make use of the decomposition
$\varphi = X - \lambda X^{\!\resizebox{0.6em}{!}{
\begin{tikzpicture}
\pgfpathmoveto{\pgfqpoint{0cm}{-0.035cm}}
\pgfpathlineto{\pgfqpoint{1.376cm}{-0.035cm}}
\pgfpathlineto{\pgfqpoint{1.376cm}{1.552cm}}
\pgfpathlineto{\pgfqpoint{0cm}{1.552cm}}
\pgfpathclose
\pgfusepath{clip}
\begin{pgfscope}
\begin{pgfscope}
\pgfpathmoveto{\pgfqpoint{0cm}{-0.035cm}}
\pgfpathlineto{\pgfqpoint{1.376cm}{-0.035cm}}
\pgfpathlineto{\pgfqpoint{1.376cm}{1.552cm}}
\pgfpathlineto{\pgfqpoint{0cm}{1.552cm}}
\pgfpathclose
\pgfusepath{clip}
\begin{pgfscope}
\begin{pgfscope}
\pgfsetdash{}{0cm}
\pgfsetlinewidth{0.818mm}
\pgfsetroundcap
\pgfsetroundjoin
\pgfsetmiterlimit{7.0}
\definecolor{eps2pgf_color}{gray}{0}\pgfsetstrokecolor{eps2pgf_color}\pgfsetfillcolor{eps2pgf_color}
\pgfpathmoveto{\pgfqpoint{0.117cm}{1.421cm}}
\pgfpathlineto{\pgfqpoint{0.682cm}{0.671cm}}
\pgfpathlineto{\pgfqpoint{1.246cm}{1.421cm}}
\pgfusepath{stroke}
\end{pgfscope}
\definecolor{eps2pgf_color}{gray}{0}\pgfsetstrokecolor{eps2pgf_color}\pgfsetfillcolor{eps2pgf_color}
\pgfpathmoveto{\pgfqpoint{0.273cm}{1.395cm}}
\pgfpathcurveto{\pgfqpoint{0.273cm}{1.432cm}}{\pgfqpoint{0.259cm}{1.467cm}}{\pgfqpoint{0.233cm}{1.492cm}}
\pgfpathcurveto{\pgfqpoint{0.207cm}{1.518cm}}{\pgfqpoint{0.173cm}{1.532cm}}{\pgfqpoint{0.137cm}{1.532cm}}
\pgfpathcurveto{\pgfqpoint{0.1cm}{1.532cm}}{\pgfqpoint{0.066cm}{1.518cm}}{\pgfqpoint{0.04cm}{1.492cm}}
\pgfpathcurveto{\pgfqpoint{0.014cm}{1.467cm}}{\pgfqpoint{0cm}{1.432cm}}{\pgfqpoint{0cm}{1.395cm}}
\pgfpathcurveto{\pgfqpoint{0cm}{1.359cm}}{\pgfqpoint{0.014cm}{1.324cm}}{\pgfqpoint{0.04cm}{1.299cm}}
\pgfpathcurveto{\pgfqpoint{0.066cm}{1.273cm}}{\pgfqpoint{0.1cm}{1.258cm}}{\pgfqpoint{0.137cm}{1.258cm}}
\pgfpathcurveto{\pgfqpoint{0.173cm}{1.258cm}}{\pgfqpoint{0.207cm}{1.273cm}}{\pgfqpoint{0.233cm}{1.299cm}}
\pgfpathcurveto{\pgfqpoint{0.259cm}{1.324cm}}{\pgfqpoint{0.273cm}{1.359cm}}{\pgfqpoint{0.273cm}{1.395cm}}
\pgfusepath{fill}
\begin{pgfscope}
\pgfsetdash{}{0cm}
\pgfsetlinewidth{0.818mm}
\pgfsetmiterlimit{7.0}
\pgfpathmoveto{\pgfqpoint{0.682cm}{0.671cm}}
\pgfpathlineto{\pgfqpoint{0.679cm}{1.418cm}}
\pgfusepath{stroke}
\end{pgfscope}
\pgfpathmoveto{\pgfqpoint{0.815cm}{1.399cm}}
\pgfpathcurveto{\pgfqpoint{0.815cm}{1.435cm}}{\pgfqpoint{0.801cm}{1.47cm}}{\pgfqpoint{0.775cm}{1.496cm}}
\pgfpathcurveto{\pgfqpoint{0.75cm}{1.521cm}}{\pgfqpoint{0.715cm}{1.536cm}}{\pgfqpoint{0.679cm}{1.536cm}}
\pgfpathcurveto{\pgfqpoint{0.643cm}{1.536cm}}{\pgfqpoint{0.608cm}{1.521cm}}{\pgfqpoint{0.582cm}{1.496cm}}
\pgfpathcurveto{\pgfqpoint{0.557cm}{1.47cm}}{\pgfqpoint{0.542cm}{1.435cm}}{\pgfqpoint{0.542cm}{1.399cm}}
\pgfpathcurveto{\pgfqpoint{0.542cm}{1.363cm}}{\pgfqpoint{0.557cm}{1.328cm}}{\pgfqpoint{0.582cm}{1.302cm}}
\pgfpathcurveto{\pgfqpoint{0.608cm}{1.276cm}}{\pgfqpoint{0.643cm}{1.262cm}}{\pgfqpoint{0.679cm}{1.262cm}}
\pgfpathcurveto{\pgfqpoint{0.715cm}{1.262cm}}{\pgfqpoint{0.75cm}{1.276cm}}{\pgfqpoint{0.775cm}{1.302cm}}
\pgfpathcurveto{\pgfqpoint{0.801cm}{1.328cm}}{\pgfqpoint{0.815cm}{1.363cm}}{\pgfqpoint{0.815cm}{1.399cm}}
\pgfusepath{fill}
\pgfpathmoveto{\pgfqpoint{1.345cm}{1.371cm}}
\pgfpathcurveto{\pgfqpoint{1.345cm}{1.408cm}}{\pgfqpoint{1.331cm}{1.442cm}}{\pgfqpoint{1.305cm}{1.468cm}}
\pgfpathcurveto{\pgfqpoint{1.28cm}{1.494cm}}{\pgfqpoint{1.245cm}{1.508cm}}{\pgfqpoint{1.209cm}{1.508cm}}
\pgfpathcurveto{\pgfqpoint{1.172cm}{1.508cm}}{\pgfqpoint{1.138cm}{1.494cm}}{\pgfqpoint{1.112cm}{1.468cm}}
\pgfpathcurveto{\pgfqpoint{1.087cm}{1.442cm}}{\pgfqpoint{1.072cm}{1.408cm}}{\pgfqpoint{1.072cm}{1.371cm}}
\pgfpathcurveto{\pgfqpoint{1.072cm}{1.335cm}}{\pgfqpoint{1.087cm}{1.3cm}}{\pgfqpoint{1.112cm}{1.274cm}}
\pgfpathcurveto{\pgfqpoint{1.138cm}{1.249cm}}{\pgfqpoint{1.172cm}{1.234cm}}{\pgfqpoint{1.209cm}{1.234cm}}
\pgfpathcurveto{\pgfqpoint{1.245cm}{1.234cm}}{\pgfqpoint{1.28cm}{1.249cm}}{\pgfqpoint{1.305cm}{1.274cm}}
\pgfpathcurveto{\pgfqpoint{1.331cm}{1.3cm}}{\pgfqpoint{1.345cm}{1.335cm}}{\pgfqpoint{1.345cm}{1.371cm}}
\pgfusepath{fill}
\begin{pgfscope}
\pgfsetdash{}{0cm}
\pgfsetlinewidth{0.818mm}
\pgfsetroundcap
\pgfsetmiterlimit{4.0}
\pgfpathmoveto{\pgfqpoint{0.682cm}{0.671cm}}
\pgfpathlineto{\pgfqpoint{0.682cm}{0.042cm}}
\pgfusepath{stroke}
\end{pgfscope}
\end{pgfscope}
\end{pgfscope}
\end{pgfscope}
\end{tikzpicture}}} + \zeta$ together with the moment bounds
from Theorem \ref{thm:main-int}. Since $X$ is Gaussian whereas $X^{\!\resizebox{0.6em}{!}{
\begin{tikzpicture}
\pgfpathmoveto{\pgfqpoint{0cm}{-0.035cm}}
\pgfpathlineto{\pgfqpoint{1.376cm}{-0.035cm}}
\pgfpathlineto{\pgfqpoint{1.376cm}{1.552cm}}
\pgfpathlineto{\pgfqpoint{0cm}{1.552cm}}
\pgfpathclose
\pgfusepath{clip}
\begin{pgfscope}
\begin{pgfscope}
\pgfpathmoveto{\pgfqpoint{0cm}{-0.035cm}}
\pgfpathlineto{\pgfqpoint{1.376cm}{-0.035cm}}
\pgfpathlineto{\pgfqpoint{1.376cm}{1.552cm}}
\pgfpathlineto{\pgfqpoint{0cm}{1.552cm}}
\pgfpathclose
\pgfusepath{clip}
\begin{pgfscope}
\begin{pgfscope}
\pgfsetdash{}{0cm}
\pgfsetlinewidth{0.818mm}
\pgfsetroundcap
\pgfsetroundjoin
\pgfsetmiterlimit{7.0}
\definecolor{eps2pgf_color}{gray}{0}\pgfsetstrokecolor{eps2pgf_color}\pgfsetfillcolor{eps2pgf_color}
\pgfpathmoveto{\pgfqpoint{0.117cm}{1.421cm}}
\pgfpathlineto{\pgfqpoint{0.682cm}{0.671cm}}
\pgfpathlineto{\pgfqpoint{1.246cm}{1.421cm}}
\pgfusepath{stroke}
\end{pgfscope}
\definecolor{eps2pgf_color}{gray}{0}\pgfsetstrokecolor{eps2pgf_color}\pgfsetfillcolor{eps2pgf_color}
\pgfpathmoveto{\pgfqpoint{0.273cm}{1.395cm}}
\pgfpathcurveto{\pgfqpoint{0.273cm}{1.432cm}}{\pgfqpoint{0.259cm}{1.467cm}}{\pgfqpoint{0.233cm}{1.492cm}}
\pgfpathcurveto{\pgfqpoint{0.207cm}{1.518cm}}{\pgfqpoint{0.173cm}{1.532cm}}{\pgfqpoint{0.137cm}{1.532cm}}
\pgfpathcurveto{\pgfqpoint{0.1cm}{1.532cm}}{\pgfqpoint{0.066cm}{1.518cm}}{\pgfqpoint{0.04cm}{1.492cm}}
\pgfpathcurveto{\pgfqpoint{0.014cm}{1.467cm}}{\pgfqpoint{0cm}{1.432cm}}{\pgfqpoint{0cm}{1.395cm}}
\pgfpathcurveto{\pgfqpoint{0cm}{1.359cm}}{\pgfqpoint{0.014cm}{1.324cm}}{\pgfqpoint{0.04cm}{1.299cm}}
\pgfpathcurveto{\pgfqpoint{0.066cm}{1.273cm}}{\pgfqpoint{0.1cm}{1.258cm}}{\pgfqpoint{0.137cm}{1.258cm}}
\pgfpathcurveto{\pgfqpoint{0.173cm}{1.258cm}}{\pgfqpoint{0.207cm}{1.273cm}}{\pgfqpoint{0.233cm}{1.299cm}}
\pgfpathcurveto{\pgfqpoint{0.259cm}{1.324cm}}{\pgfqpoint{0.273cm}{1.359cm}}{\pgfqpoint{0.273cm}{1.395cm}}
\pgfusepath{fill}
\begin{pgfscope}
\pgfsetdash{}{0cm}
\pgfsetlinewidth{0.818mm}
\pgfsetmiterlimit{7.0}
\pgfpathmoveto{\pgfqpoint{0.682cm}{0.671cm}}
\pgfpathlineto{\pgfqpoint{0.679cm}{1.418cm}}
\pgfusepath{stroke}
\end{pgfscope}
\pgfpathmoveto{\pgfqpoint{0.815cm}{1.399cm}}
\pgfpathcurveto{\pgfqpoint{0.815cm}{1.435cm}}{\pgfqpoint{0.801cm}{1.47cm}}{\pgfqpoint{0.775cm}{1.496cm}}
\pgfpathcurveto{\pgfqpoint{0.75cm}{1.521cm}}{\pgfqpoint{0.715cm}{1.536cm}}{\pgfqpoint{0.679cm}{1.536cm}}
\pgfpathcurveto{\pgfqpoint{0.643cm}{1.536cm}}{\pgfqpoint{0.608cm}{1.521cm}}{\pgfqpoint{0.582cm}{1.496cm}}
\pgfpathcurveto{\pgfqpoint{0.557cm}{1.47cm}}{\pgfqpoint{0.542cm}{1.435cm}}{\pgfqpoint{0.542cm}{1.399cm}}
\pgfpathcurveto{\pgfqpoint{0.542cm}{1.363cm}}{\pgfqpoint{0.557cm}{1.328cm}}{\pgfqpoint{0.582cm}{1.302cm}}
\pgfpathcurveto{\pgfqpoint{0.608cm}{1.276cm}}{\pgfqpoint{0.643cm}{1.262cm}}{\pgfqpoint{0.679cm}{1.262cm}}
\pgfpathcurveto{\pgfqpoint{0.715cm}{1.262cm}}{\pgfqpoint{0.75cm}{1.276cm}}{\pgfqpoint{0.775cm}{1.302cm}}
\pgfpathcurveto{\pgfqpoint{0.801cm}{1.328cm}}{\pgfqpoint{0.815cm}{1.363cm}}{\pgfqpoint{0.815cm}{1.399cm}}
\pgfusepath{fill}
\pgfpathmoveto{\pgfqpoint{1.345cm}{1.371cm}}
\pgfpathcurveto{\pgfqpoint{1.345cm}{1.408cm}}{\pgfqpoint{1.331cm}{1.442cm}}{\pgfqpoint{1.305cm}{1.468cm}}
\pgfpathcurveto{\pgfqpoint{1.28cm}{1.494cm}}{\pgfqpoint{1.245cm}{1.508cm}}{\pgfqpoint{1.209cm}{1.508cm}}
\pgfpathcurveto{\pgfqpoint{1.172cm}{1.508cm}}{\pgfqpoint{1.138cm}{1.494cm}}{\pgfqpoint{1.112cm}{1.468cm}}
\pgfpathcurveto{\pgfqpoint{1.087cm}{1.442cm}}{\pgfqpoint{1.072cm}{1.408cm}}{\pgfqpoint{1.072cm}{1.371cm}}
\pgfpathcurveto{\pgfqpoint{1.072cm}{1.335cm}}{\pgfqpoint{1.087cm}{1.3cm}}{\pgfqpoint{1.112cm}{1.274cm}}
\pgfpathcurveto{\pgfqpoint{1.138cm}{1.249cm}}{\pgfqpoint{1.172cm}{1.234cm}}{\pgfqpoint{1.209cm}{1.234cm}}
\pgfpathcurveto{\pgfqpoint{1.245cm}{1.234cm}}{\pgfqpoint{1.28cm}{1.249cm}}{\pgfqpoint{1.305cm}{1.274cm}}
\pgfpathcurveto{\pgfqpoint{1.331cm}{1.3cm}}{\pgfqpoint{1.345cm}{1.335cm}}{\pgfqpoint{1.345cm}{1.371cm}}
\pgfusepath{fill}
\begin{pgfscope}
\pgfsetdash{}{0cm}
\pgfsetlinewidth{0.818mm}
\pgfsetroundcap
\pgfsetmiterlimit{4.0}
\pgfpathmoveto{\pgfqpoint{0.682cm}{0.671cm}}
\pgfpathlineto{\pgfqpoint{0.682cm}{0.042cm}}
\pgfusepath{stroke}
\end{pgfscope}
\end{pgfscope}
\end{pgfscope}
\end{pgfscope}
\end{tikzpicture}}}$
is not, the idea is to use the regularity of $\zeta$ to conclude that it
cannot compensate $X^{\!\resizebox{0.6em}{!}{
\begin{tikzpicture}
\pgfpathmoveto{\pgfqpoint{0cm}{-0.035cm}}
\pgfpathlineto{\pgfqpoint{1.376cm}{-0.035cm}}
\pgfpathlineto{\pgfqpoint{1.376cm}{1.552cm}}
\pgfpathlineto{\pgfqpoint{0cm}{1.552cm}}
\pgfpathclose
\pgfusepath{clip}
\begin{pgfscope}
\begin{pgfscope}
\pgfpathmoveto{\pgfqpoint{0cm}{-0.035cm}}
\pgfpathlineto{\pgfqpoint{1.376cm}{-0.035cm}}
\pgfpathlineto{\pgfqpoint{1.376cm}{1.552cm}}
\pgfpathlineto{\pgfqpoint{0cm}{1.552cm}}
\pgfpathclose
\pgfusepath{clip}
\begin{pgfscope}
\begin{pgfscope}
\pgfsetdash{}{0cm}
\pgfsetlinewidth{0.818mm}
\pgfsetroundcap
\pgfsetroundjoin
\pgfsetmiterlimit{7.0}
\definecolor{eps2pgf_color}{gray}{0}\pgfsetstrokecolor{eps2pgf_color}\pgfsetfillcolor{eps2pgf_color}
\pgfpathmoveto{\pgfqpoint{0.117cm}{1.421cm}}
\pgfpathlineto{\pgfqpoint{0.682cm}{0.671cm}}
\pgfpathlineto{\pgfqpoint{1.246cm}{1.421cm}}
\pgfusepath{stroke}
\end{pgfscope}
\definecolor{eps2pgf_color}{gray}{0}\pgfsetstrokecolor{eps2pgf_color}\pgfsetfillcolor{eps2pgf_color}
\pgfpathmoveto{\pgfqpoint{0.273cm}{1.395cm}}
\pgfpathcurveto{\pgfqpoint{0.273cm}{1.432cm}}{\pgfqpoint{0.259cm}{1.467cm}}{\pgfqpoint{0.233cm}{1.492cm}}
\pgfpathcurveto{\pgfqpoint{0.207cm}{1.518cm}}{\pgfqpoint{0.173cm}{1.532cm}}{\pgfqpoint{0.137cm}{1.532cm}}
\pgfpathcurveto{\pgfqpoint{0.1cm}{1.532cm}}{\pgfqpoint{0.066cm}{1.518cm}}{\pgfqpoint{0.04cm}{1.492cm}}
\pgfpathcurveto{\pgfqpoint{0.014cm}{1.467cm}}{\pgfqpoint{0cm}{1.432cm}}{\pgfqpoint{0cm}{1.395cm}}
\pgfpathcurveto{\pgfqpoint{0cm}{1.359cm}}{\pgfqpoint{0.014cm}{1.324cm}}{\pgfqpoint{0.04cm}{1.299cm}}
\pgfpathcurveto{\pgfqpoint{0.066cm}{1.273cm}}{\pgfqpoint{0.1cm}{1.258cm}}{\pgfqpoint{0.137cm}{1.258cm}}
\pgfpathcurveto{\pgfqpoint{0.173cm}{1.258cm}}{\pgfqpoint{0.207cm}{1.273cm}}{\pgfqpoint{0.233cm}{1.299cm}}
\pgfpathcurveto{\pgfqpoint{0.259cm}{1.324cm}}{\pgfqpoint{0.273cm}{1.359cm}}{\pgfqpoint{0.273cm}{1.395cm}}
\pgfusepath{fill}
\begin{pgfscope}
\pgfsetdash{}{0cm}
\pgfsetlinewidth{0.818mm}
\pgfsetmiterlimit{7.0}
\pgfpathmoveto{\pgfqpoint{0.682cm}{0.671cm}}
\pgfpathlineto{\pgfqpoint{0.679cm}{1.418cm}}
\pgfusepath{stroke}
\end{pgfscope}
\pgfpathmoveto{\pgfqpoint{0.815cm}{1.399cm}}
\pgfpathcurveto{\pgfqpoint{0.815cm}{1.435cm}}{\pgfqpoint{0.801cm}{1.47cm}}{\pgfqpoint{0.775cm}{1.496cm}}
\pgfpathcurveto{\pgfqpoint{0.75cm}{1.521cm}}{\pgfqpoint{0.715cm}{1.536cm}}{\pgfqpoint{0.679cm}{1.536cm}}
\pgfpathcurveto{\pgfqpoint{0.643cm}{1.536cm}}{\pgfqpoint{0.608cm}{1.521cm}}{\pgfqpoint{0.582cm}{1.496cm}}
\pgfpathcurveto{\pgfqpoint{0.557cm}{1.47cm}}{\pgfqpoint{0.542cm}{1.435cm}}{\pgfqpoint{0.542cm}{1.399cm}}
\pgfpathcurveto{\pgfqpoint{0.542cm}{1.363cm}}{\pgfqpoint{0.557cm}{1.328cm}}{\pgfqpoint{0.582cm}{1.302cm}}
\pgfpathcurveto{\pgfqpoint{0.608cm}{1.276cm}}{\pgfqpoint{0.643cm}{1.262cm}}{\pgfqpoint{0.679cm}{1.262cm}}
\pgfpathcurveto{\pgfqpoint{0.715cm}{1.262cm}}{\pgfqpoint{0.75cm}{1.276cm}}{\pgfqpoint{0.775cm}{1.302cm}}
\pgfpathcurveto{\pgfqpoint{0.801cm}{1.328cm}}{\pgfqpoint{0.815cm}{1.363cm}}{\pgfqpoint{0.815cm}{1.399cm}}
\pgfusepath{fill}
\pgfpathmoveto{\pgfqpoint{1.345cm}{1.371cm}}
\pgfpathcurveto{\pgfqpoint{1.345cm}{1.408cm}}{\pgfqpoint{1.331cm}{1.442cm}}{\pgfqpoint{1.305cm}{1.468cm}}
\pgfpathcurveto{\pgfqpoint{1.28cm}{1.494cm}}{\pgfqpoint{1.245cm}{1.508cm}}{\pgfqpoint{1.209cm}{1.508cm}}
\pgfpathcurveto{\pgfqpoint{1.172cm}{1.508cm}}{\pgfqpoint{1.138cm}{1.494cm}}{\pgfqpoint{1.112cm}{1.468cm}}
\pgfpathcurveto{\pgfqpoint{1.087cm}{1.442cm}}{\pgfqpoint{1.072cm}{1.408cm}}{\pgfqpoint{1.072cm}{1.371cm}}
\pgfpathcurveto{\pgfqpoint{1.072cm}{1.335cm}}{\pgfqpoint{1.087cm}{1.3cm}}{\pgfqpoint{1.112cm}{1.274cm}}
\pgfpathcurveto{\pgfqpoint{1.138cm}{1.249cm}}{\pgfqpoint{1.172cm}{1.234cm}}{\pgfqpoint{1.209cm}{1.234cm}}
\pgfpathcurveto{\pgfqpoint{1.245cm}{1.234cm}}{\pgfqpoint{1.28cm}{1.249cm}}{\pgfqpoint{1.305cm}{1.274cm}}
\pgfpathcurveto{\pgfqpoint{1.331cm}{1.3cm}}{\pgfqpoint{1.345cm}{1.335cm}}{\pgfqpoint{1.345cm}{1.371cm}}
\pgfusepath{fill}
\begin{pgfscope}
\pgfsetdash{}{0cm}
\pgfsetlinewidth{0.818mm}
\pgfsetroundcap
\pgfsetmiterlimit{4.0}
\pgfpathmoveto{\pgfqpoint{0.682cm}{0.671cm}}
\pgfpathlineto{\pgfqpoint{0.682cm}{0.042cm}}
\pgfusepath{stroke}
\end{pgfscope}
\end{pgfscope}
\end{pgfscope}
\end{pgfscope}
\end{tikzpicture}}}$ which is less regular. In particular, we
show that the connected $4$-point function is nonzero, see Section \ref{ss:nonG}.

It remains to discuss a stretched exponential integrability of $\varphi$, leading
to the distribution property  shown in Section \ref{ss:OS0}. More precisely, we show the following result
which can be found in Proposition~\ref{lemma:int-bound}.

\begin{proposition}\label{prop:exp}
  Let $\rho$ be a weight such that $\rho^{\iota} \in L^{4, 0}$ for some $\iota
  \in (0, 1)$. For every $\kappa \in (0, 1)$ small there exists $\upsilon = O
  (\kappa) > 0$ small such that
  \[ \int_{\mathcal{S}'(\mathbb{R}^{3})} \exp\{{\beta \| \varphi \|_{H^{- 1 / 2 - 2
     \kappa} (\rho^2)}^{1 - \upsilon}} \} \nu (\mathrm{d}\varphi)< \infty \]
  provided $\beta > 0$ is chosen sufficiently small.
\end{proposition}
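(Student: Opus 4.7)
The strategy is to prove the stretched-exponential bound uniformly at the lattice approximation level and then transfer it to the limit measure $\nu$ by truncating the exponential at height $K$, passing to the weak limit along the tight subsequence from Theorem~\ref{thm:main-int}, and letting $K\to\infty$ via monotone convergence combined with the lower semicontinuity of $\varphi\mapsto\|\varphi\|_{H^{-1/2-2\kappa}(\rho^{2})}$. The uniform target is thus
\[
\sup_{M,\varepsilon}\mathbb{E}\exp\bigl\{\beta\,\|\mathcal{E}^{\varepsilon}\varphi_{M,\varepsilon}\|_{H^{-1/2-2\kappa}(\rho^{2})}^{1-\upsilon}\bigr\}<\infty
\]
for $\beta$ sufficiently small. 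Starting from the decomposition $\varphi_{M,\varepsilon}=X_{M,\varepsilon}-\lambda\tthreeone{X}_{M,\varepsilon}+\zeta_{M,\varepsilon}$ used in the proof of Theorem~\ref{thm:main-int}, together with subadditivity of $x\mapsto x^{1-\upsilon}$ on $[0,\infty)$ and H\"older's inequality, the problem reduces to establishing the same exponential integrability separately for each of the three pieces.

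The Gaussian piece $X_{M,\varepsilon}$ is handled by Fernique's theorem applied to the centered Gaussian law on $H^{-1/2-2\kappa}(\rho^{2})$, whose covariance is controlled uniformly in $(M,\varepsilon)$; this already yields the stronger quadratic bound $\mathbb{E}\exp\{c\|X_{M,\varepsilon}\|^{2}\}<\infty$. The cubic tree $\tthreeone{X}_{M,\varepsilon}$ belongs to a fixed inhomogeneous Wiener chaos of bounded order, so Nelson's hypercontractivity gives moment bounds of the form $\mathbb{E}\|\tthreeone{X}_{M,\varepsilon}\|^{2p}\leqslant C^{p}p^{c_{1}p}$ with an explicit $c_{1}>0$; expanding the exponential as a power series and invoking Stirling's formula yields stretched-exponential integrability of the corresponding exponent $1-\upsilon$ strictly below $2/c_{1}$.

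For the remainder $\zeta_{M,\varepsilon}$ I would revisit the energy inequality of Theorem~\ref{th:energy-estimate-int} with explicit $p$-dependence: raise it to an integer power $p$, take stationary expectation, and exploit that the driving quantity $Q_{\rho}(\mathbb{X}_{M,\varepsilon})$ is a polynomial of fixed degree in stochastic objects of bounded Wiener chaos order. A further application of hypercontractivity then produces $\mathbb{E}\|\zeta_{M,\varepsilon}\|_{L^{2,\varepsilon}(\rho^{2})}^{2p}\leqslant C^{p}p^{c_{2}p}$ for an explicit $c_{2}>0$, and by Stirling's formula this converts into a stretched-exponential tail for $\zeta_{M,\varepsilon}$. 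Combining the three pieces via H\"older closes the uniform estimate.

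The principal technical obstacle is this explicit $p$-bookkeeping inside the energy argument. Theorem~\ref{th:energy-estimate-int} hides the $p$-dependence behind the universal constant on its right-hand side, so each Young's inequality and each paraproduct estimate used in Section~\ref{sec:tight} must be re-examined with splitting constants tuned proportionally to an appropriate power of $p$, so that the coercive term $\lambda\|\rho\phi_{M,\varepsilon}\|_{L^{4,\varepsilon}}^{4}$ absorbs the high moments without destroying the uniformity in $(M,\varepsilon)$ or the $\lambda$-dependence. The admissible value of $1-\upsilon$ is then dictated by the worst combination of polynomial degree and chaos order appearing in $Q_{\rho}$; once this exponent is fixed, $\beta$ is chosen small enough (depending on $\lambda$ and $\kappa$) so that the power-series expansion of $\exp\{\beta(\cdot)^{1-\upsilon}\}$ converges term by term.
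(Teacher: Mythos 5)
Your proposal misses the central idea of the paper's proof and, as written, cannot yield the claimed exponent $\upsilon = O(\kappa)$.

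The obstruction is the cubic tree $\tthreeone{X}$. This object lives in the third homogeneous Wiener chaos of $X$, so the sharp stretched-exponential tail of its Besov norm has exponent $2/3$: one has $\mathbb{E}\exp(c\|\tthreeone{X}\|^{\alpha}) = \infty$ for any $\alpha>2/3$. Precisely this is encoded in the paper's uniform bound \eqref{eq:exp-int}, where $\|\tthreeone{X}\|^{1/3}$ is one of the quantities entering $\|\mathbb{X}\|$ and only $\mathbb{E}e^{\beta\|\mathbb{X}\|^{2}}$ is finite. If you decompose $\varphi = X - \lambda\tthreeone{X} + \zeta$ and treat the pieces separately by H\"older, the term $\mathbb{E}\exp(3\beta\lambda^{1-\upsilon}\|\tthreeone{X}\|^{1-\upsilon})$ forces $1-\upsilon\leqslant 2/3$, i.e.\ $\upsilon\geqslant 1/3$, which is nowhere near $O(\kappa)$. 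Your hypercontractivity/Stirling computation for $\tthreeone{X}$ correctly reproduces exponent $2/3$; the problem is that this is not good enough, and no amount of bookkeeping on the $\zeta$ side can repair it, since you feed $\tthreeone{X}$ untouched into the H\"older product.

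The paper's proof does not use the decomposition $\varphi = X - \lambda\tthreeone{X} + \zeta$ here at all. It works with $\varphi = X + Y_{M,\varepsilon} + \phi_{M,\varepsilon}$ and its key move is a scale-adapted truncation of the cubic Wick power: setting $\llbracket X^3\rrbracket_{\leqslant} = v_K\ast_t \Delta^\varepsilon_{\leqslant K}\llbracket X^3\rrbracket$ with the cut-off level tuned to the realization, $2^{K/2} = \|\mathbb{X}\|^{1/(1-4\kappa)}$, redefining $Y$ to be driven by the high-frequency remainder $\llbracket X^3\rrbracket_{>}$, and moving the low-frequency piece into the energy balance for $\phi$ as the extra forcing $\langle\rho^4\phi_\varepsilon, -\lambda\llbracket X_\varepsilon^3\rrbracket_{\leqslant}\rangle_\varepsilon$. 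This yields the sharpened bound $\|Y\|\lesssim\lambda\|\mathbb{X}\|^{2}$ (quadratic, not cubic, in $\|\mathbb{X}\|$, so directly compatible with $\mathbb{E}e^{\beta\|\mathbb{X}\|^{2}}<\infty$) and an energy inequality for $\phi$ in which the stochastic forcing never exceeds $\|\mathbb{X}\|^{8+\vartheta}$. The differential inequality for $e^{\beta\langle t\phi_\varepsilon\rangle^{1-\upsilon}}$ is then closed by a realization-dependent dichotomy (whether or not $\|\mathbb{X}\|^{2}$ dominates $\varsigma\|t\rho\phi_\varepsilon\|_{L^{4,\varepsilon}}^{1-\upsilon}$), and the relation $(4+\vartheta/2)(1-\upsilon)=4$ pins down $\upsilon = O(\kappa)$. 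None of this mechanism — adaptive truncation, re-solving for $Y$, moving the truncated term into the coercive $L^4$ absorption — appears in your argument, and without it the $\tthreeone{X}$ chaos order blocks any exponent above $2/3$.

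A secondary point: your plan for $\zeta$ (re-running the energy estimate with explicit $p$-dependence and invoking hypercontractivity on $Q_\rho$) is in the spirit of what would be needed, but the paper's Corollary~\ref{cor:Lp} already raises the energy inequality to the $p$-th power and still hides the constant's growth. Even granting that this could be carried out, it is not what the paper does — the paper never takes high moments of $\zeta$, it bounds a time-reweighted exponential of $\phi$ directly along the flow — and, most importantly, it does not address the actual bottleneck described above.
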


In order to obtain this bound we revisit the bounds from Theorem
\ref{th:energy-estimate-int} and track the precise dependence of the
polynomial $Q_{\rho} (\mathbb{X}_{M, \varepsilon})$ on the right hand side of the estimate on
the quantity $\| \mathbb{X}_{M, \varepsilon} \|$ which will be defined  through \eqref{eq:XX1}, \eqref{eq:XX2}, \eqref{eq:XX3} below taking into account the number of copies of $X$ appearing in each stochastic object. However, the estimates in
Theorem \ref{th:energy-estimate-int} are not optimal and consequently the
power of $\| \mathbb{X}_{M, \varepsilon} \|$ in Theorem
\ref{th:energy-estimate-int} is too
large. To optimize we introduce a large momentum cut-off $\llbracket X^3_{M,
\varepsilon} \rrbracket_{\leqslant}$ given by a parameter $K > 0$ and let
$\llbracket X^3_{M, \varepsilon} \rrbracket_{>} \assign \llbracket X^3_{M,
\varepsilon} \rrbracket - \llbracket X^3_{M, \varepsilon}
\rrbracket_{\leqslant}$. Then we modify the dynamics of $Y_{M, \varepsilon}$
to
\[ \LL_{\varepsilon} Y_{M, \varepsilon} = - \llbracket
   X_{M, \varepsilon}^3 \rrbracket_{>} - 3 \lambda
   (\mathscr{U}^{\varepsilon}_{>} \llbracket X_{M, \varepsilon}^2 \rrbracket)
   \succ Y_{M, \varepsilon}, \]
which allows for refined bounds on $Y_{M, \varepsilon}$, yielding optimal
powers of $\| \mathbb{X}_{M, \varepsilon} \|$.

\paragraph{Integration by parts formula.}
The uniform energy estimates from Theorem~\ref{thm:main-int} and Proposition~\ref{prop:exp} are enough
to obtain tightness of the approximate measures and to show that any
accumulation point satisfies the distribution property,  translation invariance, reflection positivity and
non-Gaussianity. However, they do not provide sufficient regularity in order to
identify the continuum dynamics or to establish the hierarchy of
Dyson--Schwinger equations providing relations of various $n$-point
correlation functions. This can be seen easily since neither the resonant
product $\llbracket X_{M, \varepsilon}^2 \rrbracket \circ \phi_{M,
\varepsilon}$ nor $\llbracket X_{M, \varepsilon}^2 \rrbracket \circ \psi_{M,
\varepsilon}$ is well-defined in the limit.
Another and even more severe difficulty lies in the fact that the  third Wick power $\llbracket X^3 \rrbracket$ only exists as a space-time distribution and is  not a well-defined random variable under the $\Phi^{4}_{3}$ measure, cf.~\cite{albeverio_remark_2006}.

To overcome the first issue, we  introduce a new
paracontrolled ansatz
$  \chi_{M, \varepsilon} \assign \phi_{M, \varepsilon} + 3\lambda X_{M,
  \varepsilon}^{\!\resizebox{0.6em}{!}{
\begin{tikzpicture}
\pgfpathmoveto{\pgfqpoint{0cm}{0cm}}
\pgfpathlineto{\pgfqpoint{1.376cm}{0cm}}
\pgfpathlineto{\pgfqpoint{1.376cm}{1.588cm}}
\pgfpathlineto{\pgfqpoint{0cm}{1.588cm}}
\pgfpathclose
\pgfusepath{clip}
\begin{pgfscope}
\begin{pgfscope}
\pgfpathmoveto{\pgfqpoint{0cm}{0cm}}
\pgfpathlineto{\pgfqpoint{1.376cm}{0cm}}
\pgfpathlineto{\pgfqpoint{1.376cm}{1.588cm}}
\pgfpathlineto{\pgfqpoint{0cm}{1.588cm}}
\pgfpathclose
\pgfusepath{clip}
\begin{pgfscope}
\begin{pgfscope}
\definecolor{eps2pgf_color}{gray}{0.976471}\pgfsetstrokecolor{eps2pgf_color}\pgfsetfillcolor{eps2pgf_color}
\pgfpathmoveto{\pgfqpoint{0cm}{0cm}}
\pgfpathlineto{\pgfqpoint{1.376cm}{0cm}}
\pgfpathlineto{\pgfqpoint{1.376cm}{1.588cm}}
\pgfpathlineto{\pgfqpoint{0cm}{1.588cm}}
\pgfpathclose
\pgfusepath{fill}
\end{pgfscope}
\begin{pgfscope}
\pgfsetdash{}{0cm}
\pgfsetlinewidth{0.818mm}
\pgfsetroundcap
\pgfsetroundjoin
\pgfsetmiterlimit{7.0}
\definecolor{eps2pgf_color}{gray}{0}\pgfsetstrokecolor{eps2pgf_color}\pgfsetfillcolor{eps2pgf_color}
\pgfpathmoveto{\pgfqpoint{0.117cm}{1.476cm}}
\pgfpathlineto{\pgfqpoint{0.682cm}{0.726cm}}
\pgfpathlineto{\pgfqpoint{1.246cm}{1.476cm}}
\pgfusepath{stroke}
\end{pgfscope}
\definecolor{eps2pgf_color}{gray}{0}\pgfsetstrokecolor{eps2pgf_color}\pgfsetfillcolor{eps2pgf_color}
\pgfpathmoveto{\pgfqpoint{0.273cm}{1.451cm}}
\pgfpathcurveto{\pgfqpoint{0.273cm}{1.487cm}}{\pgfqpoint{0.259cm}{1.522cm}}{\pgfqpoint{0.233cm}{1.547cm}}
\pgfpathcurveto{\pgfqpoint{0.207cm}{1.573cm}}{\pgfqpoint{0.173cm}{1.588cm}}{\pgfqpoint{0.137cm}{1.588cm}}
\pgfpathcurveto{\pgfqpoint{0.1cm}{1.588cm}}{\pgfqpoint{0.066cm}{1.573cm}}{\pgfqpoint{0.04cm}{1.547cm}}
\pgfpathcurveto{\pgfqpoint{0.014cm}{1.522cm}}{\pgfqpoint{0cm}{1.487cm}}{\pgfqpoint{0cm}{1.451cm}}
\pgfpathcurveto{\pgfqpoint{0cm}{1.414cm}}{\pgfqpoint{0.014cm}{1.379cm}}{\pgfqpoint{0.04cm}{1.354cm}}
\pgfpathcurveto{\pgfqpoint{0.066cm}{1.328cm}}{\pgfqpoint{0.1cm}{1.314cm}}{\pgfqpoint{0.137cm}{1.314cm}}
\pgfpathcurveto{\pgfqpoint{0.173cm}{1.314cm}}{\pgfqpoint{0.207cm}{1.328cm}}{\pgfqpoint{0.233cm}{1.354cm}}
\pgfpathcurveto{\pgfqpoint{0.259cm}{1.379cm}}{\pgfqpoint{0.273cm}{1.414cm}}{\pgfqpoint{0.273cm}{1.451cm}}
\pgfusepath{fill}
\pgfpathmoveto{\pgfqpoint{1.345cm}{1.426cm}}
\pgfpathcurveto{\pgfqpoint{1.345cm}{1.463cm}}{\pgfqpoint{1.331cm}{1.497cm}}{\pgfqpoint{1.305cm}{1.523cm}}
\pgfpathcurveto{\pgfqpoint{1.28cm}{1.549cm}}{\pgfqpoint{1.245cm}{1.563cm}}{\pgfqpoint{1.209cm}{1.563cm}}
\pgfpathcurveto{\pgfqpoint{1.172cm}{1.563cm}}{\pgfqpoint{1.138cm}{1.549cm}}{\pgfqpoint{1.112cm}{1.523cm}}
\pgfpathcurveto{\pgfqpoint{1.087cm}{1.497cm}}{\pgfqpoint{1.072cm}{1.463cm}}{\pgfqpoint{1.072cm}{1.426cm}}
\pgfpathcurveto{\pgfqpoint{1.072cm}{1.39cm}}{\pgfqpoint{1.087cm}{1.355cm}}{\pgfqpoint{1.112cm}{1.329cm}}
\pgfpathcurveto{\pgfqpoint{1.138cm}{1.304cm}}{\pgfqpoint{1.172cm}{1.289cm}}{\pgfqpoint{1.209cm}{1.289cm}}
\pgfpathcurveto{\pgfqpoint{1.245cm}{1.289cm}}{\pgfqpoint{1.28cm}{1.304cm}}{\pgfqpoint{1.305cm}{1.329cm}}
\pgfpathcurveto{\pgfqpoint{1.331cm}{1.355cm}}{\pgfqpoint{1.345cm}{1.39cm}}{\pgfqpoint{1.345cm}{1.426cm}}
\pgfusepath{fill}
\begin{pgfscope}
\pgfsetdash{}{0cm}
\pgfsetlinewidth{0.818mm}
\pgfsetroundcap
\pgfsetmiterlimit{4.0}
\pgfpathmoveto{\pgfqpoint{0.682cm}{0.726cm}}
\pgfpathlineto{\pgfqpoint{0.682cm}{0.097cm}}
\pgfusepath{stroke}
\end{pgfscope}
\end{pgfscope}
\end{pgfscope}
\end{pgfscope}
\end{tikzpicture}}} \succ \phi_{M, \varepsilon}$
and show that $\chi_{M,\varepsilon}$ possesses enough regularity uniformly in $M,\varepsilon$ in order to pass to the limit in the resonant product $\llbracket X^{2}_{M,\varepsilon}\rrbracket\circ \chi_{M,\varepsilon}$.
Namely,  we establish uniform bounds for $\chi_{M,\varepsilon}$ in $L^1_T B_{1, 1}^{1 + 3 \kappa,
     \varepsilon}(\rho^{4})$. This not only allows to give meaning to the critical resonant product in the continuum, but  it  also leads to a uniform time regularity of the processes $\varphi_{M,\varepsilon}$.
We obtain the following result proved below as Theorem~\ref{thm:phitight}.

\begin{theorem}
  \label{thm:phitight-int}Let $\beta \in (0, 1 / 4)$ and $\sigma\in (0,1)$. Then  for
  all $p \in [1, \infty)$ and $\tau \in (0, T)$
  \[ \sup_{\varepsilon \in \mathcal{A}, M > 0} \mathbb{E} \| \varphi_{M,
     \varepsilon} \|^{2 p}_{W^{\beta, 1}_T B_{1, 1}^{- 1 - 3 \kappa,\varepsilon} (\rho^{4
     + \sigma})} + \sup_{\varepsilon \in \mathcal{A}, M > 0} \mathbb{E} \|
     \varphi_{M, \varepsilon} \|^{2 p}_{L^{\infty}_{\tau, T} H^{- 1 / 2 -2
     \kappa,\varepsilon} (\rho^2)}  < \infty, \]
  where $L^{\infty}_{\tau, T} H^{- 1 / 2 -2 \kappa,\varepsilon} (\rho^2) = L^{\infty}
  (\tau, T ; H^{- 1 / 2 -2 \kappa,\varepsilon} (\rho^2))$.
\end{theorem}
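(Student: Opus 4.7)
The plan is to exploit the two decompositions of $\varphi_{M,\varepsilon}$ developed earlier in the paper, choosing each according to the norm being estimated. For the pointwise-in-time spatial bound, I use the stationary decomposition $\varphi_{M,\varepsilon}=X_{M,\varepsilon}-\lambda\,\tthreeone{X}_{M,\varepsilon}+\zeta_{M,\varepsilon}$. The Gaussian process $X_{M,\varepsilon}$ and its third heat-integrated Wick power $\tthreeone{X}_{M,\varepsilon}$ are stationary solutions of linear SPDEs driven by $\xi_{M,\varepsilon}$ and $\llbracket X^3_{M,\varepsilon}\rrbracket$ respectively; Gaussian hypercontractivity on the lattice together with the discrete weighted Schauder theory from Appendix~\ref{s:app} yields uniform pathwise bounds (with all moments) for $X_{M,\varepsilon}\in C_T\mathscr{C}^{-1/2-\kappa,\varepsilon}(\rho^\sigma)$ and $\tthreeone{X}_{M,\varepsilon}\in C_T\mathscr{C}^{1/2-\kappa,\varepsilon}(\rho^\sigma)$. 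The remainder $\zeta_{M,\varepsilon}$ is handled by a pathwise integrated form of the energy inequality of Theorem~\ref{th:energy-estimate-int}: integrating on $[\tau,T]$ gives $\sup_{t\in[\tau,T]}\|\zeta_{M,\varepsilon}(t)\|_{L^{2}(\rho^{2})}$ in terms of $\|\zeta_{M,\varepsilon}(\tau)\|_{L^{2}(\rho^{2})}$ plus a polynomial in the stochastic data, and the initial slice at $\tau$ has uniform moments by stationarity and Theorem~\ref{thm:main-int}. Embedding $L^{2}(\rho^{2})\hookrightarrow H^{-1/2-2\kappa}(\rho^{2})$ yields the second bound.

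For the time-Sobolev estimate I read $\partial_{t}\varphi_{M,\varepsilon}$ off from the equation~\eqref{eq:moll}:
\[
\partial_{t}\varphi_{M,\varepsilon}=\xi_{M,\varepsilon}+\Delta_{\varepsilon}\varphi_{M,\varepsilon}-m^{2}\varphi_{M,\varepsilon}-\lambda\varphi_{M,\varepsilon}^{3}+(3\lambda a_{M,\varepsilon}-3\lambda^{2}b_{M,\varepsilon})\varphi_{M,\varepsilon}.
\]
The linear piece $\Delta_{\varepsilon}\varphi_{M,\varepsilon}-m^{2}\varphi_{M,\varepsilon}$ lies in $L^{\infty}_{\tau,T}B^{-5/2-2\kappa,\varepsilon}_{1,1}(\rho^{2})$ by the bound just obtained (plus Besov embeddings, the finite-volume weight $\rho^{4+\sigma}$ being more permissive than $\rho^{2}$). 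The noise contribution is the stochastic integral $\int_{0}^{\cdot}\xi_{M,\varepsilon}$, which is uniformly bounded in $W^{\beta,2}_{T}\mathscr{C}^{-d/2-\kappa,\varepsilon}$ for every $\beta<1/2$ by standard BDG-type arguments; since $T$ is finite this embeds into $W^{\beta,1}_{T}B^{-1-3\kappa,\varepsilon}_{1,1}(\rho^{4+\sigma})$. The only non-routine term is the renormalized cubic, which I reassemble using the dynamical decomposition $\varphi_{M,\varepsilon}=X_{M,\varepsilon}+Y_{M,\varepsilon}+\phi_{M,\varepsilon}$, absorbing the counterterms $-3\lambda a_{M,\varepsilon}\varphi+3\lambda^{2}b_{M,\varepsilon}\varphi$ into the Wick products $\llbracket X^{3}\rrbracket$, $\llbracket X^{2}\rrbracket\cdot\bullet$ and into the renormalized resonance $\llbracket X^{2}\rrbracket\circ\ttwoone{X}_{M,\varepsilon}$. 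After this rearrangement the only analytically critical term is $\llbracket X_{M,\varepsilon}^{2}\rrbracket\circ\phi_{M,\varepsilon}$.

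The linchpin here is the paracontrolled ansatz $\chi_{M,\varepsilon}=\phi_{M,\varepsilon}+3\lambda\,\ttwoone{X}_{M,\varepsilon}\succ\phi_{M,\varepsilon}$ introduced in the excerpt. Substituting $\phi=\chi-3\lambda\,\ttwoone{X}\succ\phi$ into the resonance and applying the commutator estimate
\[
\|(f\succ g)\circ h-f\,(g\circ h)\|_{B^{\alpha+\beta+\gamma,\varepsilon}_{p,q}}\lesssim \|f\|_{\mathscr{C}^{\gamma,\varepsilon}}\|g\|_{\mathscr{C}^{\beta,\varepsilon}}\|h\|_{\mathscr{C}^{\alpha,\varepsilon}},
\]
uniform in $\varepsilon$ (see Appendix~\ref{s:app}), together with the explicit bound on the renormalized resonance $\llbracket X^{2}\rrbracket\circ\ttwoone{X}$, rewrites $\llbracket X^{2}\rrbracket\circ\phi$ as $\chi$ multiplied against explicit stochastic data plus genuinely smoother remainders. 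The $L^{1}_{T}B^{1+3\kappa,\varepsilon}_{1,1}(\rho^{4})$-bound on $\chi_{M,\varepsilon}$ (the hypothesis advertised in the prose just above the theorem) then delivers $\llbracket\varphi_{M,\varepsilon}^{3}\rrbracket\in L^{1}_{T}B^{-1-3\kappa,\varepsilon}_{1,1}(\rho^{4+\sigma})$, where the extra $\sigma$ absorbs the polynomial growth of the stochastic factors against the weight on $\chi$. Putting the pieces together gives $\partial_{t}\varphi_{M,\varepsilon}\in W^{\beta,1}_{T}B^{-1-3\kappa,\varepsilon}_{1,1}(\rho^{4+\sigma})$ for every $\beta<1/2$, whence, using that $\varphi_{M,\varepsilon}(0)$ is stationary with uniform $H^{-1/2-2\kappa}(\rho^{2})$ moments, the first claim follows from the primitive/Sobolev identity $\|\varphi\|_{W^{\beta,1}_{T}(E)}\lesssim \|\varphi(0)\|_{E}+\|\partial_{t}\varphi\|_{L^{1}_{T}(E)}$ valid for $\beta<1$.

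The main obstacle is the rigorous paracontrolled reassembly of $\llbracket\varphi^{3}_{M,\varepsilon}\rrbracket$ on the lattice: the commutators, the logarithmic divergence of $\llbracket X^{2}\rrbracket\circ\ttwoone{X}$ and the counterterm $3\lambda^{2}b_{M,\varepsilon}\varphi_{M,\varepsilon}$ must cancel exactly and the resulting bounds must remain uniform as $\varepsilon\to0$ and $M\to\infty$ in weighted norms; this bookkeeping is the heart of the argument and forces the slight weight loss from $\rho^{4}$ to $\rho^{4+\sigma}$ in the statement. Once this identification is performed, the required estimates reduce to direct Besov multiplication and time-regularity lemmas from Appendix~\ref{s:app}.
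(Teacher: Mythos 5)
Your proposed route for the time--Sobolev bound contains a genuine gap at the step where you place $\partial_{t}\varphi_{M,\varepsilon}$ in $L^{1}_{T}B^{-1-3\kappa,\varepsilon}_{1,1}(\rho^{4+\sigma})$. Reading off $\partial_{t}\varphi_{M,\varepsilon}$ from \eqref{eq:moll} produces a term $\lambda\llbracket\varphi^{3}_{M,\varepsilon}\rrbracket$, and after the paracontrolled reassembly this still contains $\lambda\llbracket X^{3}_{M,\varepsilon}\rrbracket$. That object does \emph{not} admit a uniform $L^{1}_{T}B^{-1-3\kappa}_{1,1}$ bound: its spatial regularity is only $-3/2-\kappa$ (strictly worse than $-1-3\kappa$ for small $\kappa$), and more fundamentally it is not a function of $t$ at all --- the paper records that $\mathcal E^{\varepsilon}\llbracket X^{3}_{M,\varepsilon}\rrbracket$ converges only in $W^{-\kappa,\infty}_{T}\CC^{-3/2-\kappa}(\rho^{\sigma})$, i.e.\ it has \emph{negative} time regularity. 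Consequently the intended step ``$\partial_{t}\varphi\in L^{1}_{T}(E)$ hence $\varphi\in W^{\beta,1}_{T}(E)$'' breaks down, and this is independent of how carefully you handle the commutators, the logarithmic renormalization constant, or the weight loss.

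The paper avoids this by never differentiating the whole field in time. It estimates the $W^{\beta,1}_{T}$ norm of $\varphi=X-\lambda\tthreeone{X}+\zeta$ one summand at a time. For $X$ and $\tthreeone{X}$ one does not take $\partial_t$ (which would reintroduce $\llbracket X^3\rrbracket$ via the identity $\LL\tthreeone{X}=\llbracket X^{3}\rrbracket$); instead, Gaussian estimates give $X\in C^{\beta}_{T}\CC^{-1/2-\kappa-2\beta,\varepsilon}(\rho^{\sigma})$ and $\tthreeone{X}\in C^{\beta}_{T}\CC^{1/2-\kappa-2\beta,\varepsilon}(\rho^{\sigma})$ directly, so that $\beta$ of time H\"older regularity costs $2\beta$ spatial derivatives. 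Landing both pieces in $B^{-1-3\kappa}_{1,1}$ then forces precisely $\beta<1/4$. This is also where your claimed range $\beta<1/2$ goes wrong; the constraint comes from this parabolic trade, not from the noise or from a generic primitive--Sobolev estimate. Only the remainder $\zeta$ is treated by differentiating in time: via \eqref{eq:24z}, its time regularity reduces to $\partial_t\phi_{M,\varepsilon}\in L^{1}_{T}B^{-1-3\kappa,\varepsilon}_{1,1}(\rho^{4+\sigma})$, and it is \emph{here} that Proposition~\ref{prop:reg} and the ansatz $\chi=\phi+3\lambda\,\ttwoone{X}\succ\phi$ enter. Your paracontrolled intuition is essentially correct, but it must be applied to $\phi$ (whose equation has no $\llbracket X^{3}\rrbracket$ on the right-hand side) rather than to $\varphi$.

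For the $L^{\infty}_{\tau,T}H^{-1/2-2\kappa}(\rho^{2})$ bound your route via the stationary decomposition and a pathwise integrated energy inequality for $\zeta$ is a legitimate alternative to the paper's route, which bounds $\phi$ through Corollary~\ref{cor:Lp} and adds uniform bounds for $X$ and $Y$; the two are essentially equivalent in spirit.
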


This additional time regularity  is then used in order to treat the second issue raised above
and  to construct a renormalized cubic term $\llbracket
\varphi^3 \rrbracket$. More precisely, we derive an explicit formula for $\llbracket \varphi^{3}\rrbracket$ including  $\llbracket X^3 \rrbracket$ as a space-time distribution, where  {\em time} indeed means the fictitious {\em stochastic} time variable introduced by the stochastic quantization, nonexistent under the $\Phi^{4}_{3}$ measure. In order to control $\llbracket X^3 \rrbracket$ we re-introduce the stochastic time and use stationarity together with the above mentioned time regularity.
Finally, we derive an integration by parts formula
 leading to the hierarchy of Dyson--Schwinger equations connecting the correlation functions. 
To this end, we recall that a cylinder function $F$ on $\mathcal{S}' (\mathbb{R}^3)$ has the form $F
(\varphi) = \Phi (\varphi (f_1), \ldots, \varphi (f_n))$ where $\Phi :
\mathbb{R}^n \rightarrow \mathbb{R}$ and $f_1, \ldots, f_n \in \mathcal{S}
(\mathbb{R}^3)$. Loosely stated, the result proved in Theorem \ref{thm:ibp} says the following.

\begin{theorem}
  \label{thm:ibp-int}
  Let $F : \mathcal{S}' (\mathbb{R}^3) \rightarrow \mathbb{R}$
  be a cylinder function such that
  \[ | F (\varphi) | + \| \mathD F (\varphi) \|_{B_{\infty, \infty}^{1 + 3
     \kappa} (\rho^{- 4 - \sigma})} \leqslant C_F \| \varphi \|_{H^{- 1 / 2 -
    2 \kappa} (\rho^2)}^n \]
  for some $n \in \mathbb{N}$, where $\mathD F (\varphi)$  the $L^2$-gradient of $F$. Any accumulation point $\nu$ of the sequence $(\nu_{M,
  \varepsilon} \circ (\mathcal{E}^{\varepsilon})^{- 1})_{M, \varepsilon}$
  satisfies for all $f\in \mathcal{S}(\mathbb{R}^{3})$
  \begin{equation*}
    \int \langle\mathD F (\varphi),f\rangle \nu (\mathd \varphi) = 2 \int \langle(m^2 - \Delta)
    \varphi,f\rangle F (\varphi) \nu (\mathd \varphi) + 2\lambda \langle\mathcal{J}_{\nu} (F),f\rangle,
  \end{equation*}
  where for a smooth $h : \mathbb{R} \rightarrow \mathbb{R}$ 
with $\tmop{supp} h \subset [\tau, T]$ for some $0 < \tau < T < \infty$ and
 $\int_{\mathbb{R}} h (t) \mathd t = 1$ we have for all $f\in \mathcal{S}(\mathbb{R}^{3})$
  \[ \langle\mathcal{J}_{\nu} (F),f\rangle
   =\mathbb{E}_{\nu} \left[ \int_{\mathbb{R}} h (t) F (\varphi (t))\langle \llbracket
   \varphi^3 \rrbracket (t) ,f\rangle\mathd t \right]
    \]
    and $\llbracket \varphi^{3}\rrbracket$ is given by an explicit formula, namely, \eqref{eq:phi3}.
\end{theorem}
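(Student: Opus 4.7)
The starting point is a lattice integration by parts formula for $\nu_{M,\varepsilon}$. Since $\nu_{M,\varepsilon}$ is a (finite dimensional) Gibbs measure of the form $\mathrm{d}\nu_{M,\varepsilon}\propto e^{-S_{M,\varepsilon}(\varphi)}\mathrm{d}\varphi$, one has for every $x\in\Lambda_{M,\varepsilon}$ the identity
\begin{equation*}
\int \partial_{\varphi_x} G(\varphi)\,\mathrm{d}\nu_{M,\varepsilon}=2\varepsilon^{d}\int (\partial_{\varphi_x}S_{M,\varepsilon}(\varphi))\,G(\varphi)\,\mathrm{d}\nu_{M,\varepsilon},
\end{equation*}
which, written against a test function $f$ and summed over $x$, gives, for a cylinder $F$ of the form in the statement,
\begin{equation*}
\int \langle \mathrm{D}F(\varphi),f\rangle_{\varepsilon}\,\mathrm{d}\nu_{M,\varepsilon}
=2\int \langle \mathscr{Q}_{\varepsilon}\varphi+\lambda\varphi^{3}+(-3\lambda a_{M,\varepsilon}+3\lambda^{2}b_{M,\varepsilon})\varphi,f\rangle_{\varepsilon}\,F(\varphi)\,\mathrm{d}\nu_{M,\varepsilon}.
\end{equation*}
This is the only place where the specific form of $\nu_{M,\varepsilon}$ enters; everything else is an analytic argument about taking limits.

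Next I would use stationarity of the dynamics $\varphi_{M,\varepsilon}$ to rewrite the cubic term as a time average: for any smooth $h$ with $\int h=1$ and $\operatorname{supp} h\subset[\tau,T]$,
\begin{equation*}
\int \varphi^{3}\,F(\varphi)\,\mathrm{d}\nu_{M,\varepsilon}=\mathbb{E}\!\left[\int h(t)\,F(\varphi_{M,\varepsilon}(t))\,\varphi_{M,\varepsilon}^{3}(t)\,\mathrm{d}t\right].
\end{equation*}
On the right-hand side I can now invoke the decomposition $\varphi_{M,\varepsilon}=X_{M,\varepsilon}+Y_{M,\varepsilon}+\phi_{M,\varepsilon}$ developed above together with the paracontrolled ansatz $\chi_{M,\varepsilon}=\phi_{M,\varepsilon}+3\lambda\,\ttwoone{X_{M,\varepsilon}}\succ\phi_{M,\varepsilon}$. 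Expanding $\varphi_{M,\varepsilon}^{3}$ produces terms of the form $\llbracket X^{3}_{M,\varepsilon}\rrbracket$, $\llbracket X^{2}_{M,\varepsilon}\rrbracket\succ\phi_{M,\varepsilon}$, $\llbracket X^{2}_{M,\varepsilon}\rrbracket\circ\chi_{M,\varepsilon}$, commutator/renormalisation remainders arising from rewriting the resonant product through the paracontrolled ansatz, plus terms involving $Y_{M,\varepsilon}$ and polynomials of $\phi_{M,\varepsilon}$. The linear counterterms $-3\lambda a_{M,\varepsilon}\varphi$ and $3\lambda^{2}b_{M,\varepsilon}\varphi$ then combine with the bare powers of $X_{M,\varepsilon}$ inside the expansion to produce the Wick objects $\llbracket X^{2}_{M,\varepsilon}\rrbracket$, $\llbracket X^{3}_{M,\varepsilon}\rrbracket$ and the second-order renormalisations appearing in the construction of $Y_{M,\varepsilon}$, in perfect agreement with the dynamics \eqref{eq:ph}. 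Calling the resulting sum $\llbracket\varphi_{M,\varepsilon}^{3}\rrbracket$, I obtain a lattice IBP formula which, apart from the pairing of $F(\varphi_{M,\varepsilon}(t))$ with $\llbracket\varphi_{M,\varepsilon}^{3}\rrbracket(t)$, only involves well-defined quantities.

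The remaining step is to pass to the joint limit along the subsequence producing $\nu$. The first two terms in the IBP formula are linear/polynomial expressions to which the uniform bounds of Theorem~\ref{thm:main-int} and the hypothesis on $F$ apply directly. For the cubic term one needs convergence of $F(\varphi_{M,\varepsilon})\llbracket\varphi_{M,\varepsilon}^{3}\rrbracket$ against the space-time test function $h(t)f(x)$. The contributions that survive at fixed time (those involving $\llbracket X^{2}_{M,\varepsilon}\rrbracket\circ\chi_{M,\varepsilon}$, $\chi_{M,\varepsilon}^{3}$, commutators, etc.) are controlled by the paracontrolled machinery together with the uniform estimate on $\chi_{M,\varepsilon}$ in $L^{1}_{T}B^{1+3\kappa,\varepsilon}_{1,1}(\rho^{4})$ and the time regularity of $\varphi_{M,\varepsilon}$ in $W^{\beta,1}_{T}B^{-1-3\kappa,\varepsilon}_{1,1}(\rho^{4+\sigma})$ from Theorem~\ref{thm:phitight-int}; classical Aubin--Lions type compactness then yields strong convergence in the appropriate spaces and identifies the limiting resonant product.

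The main obstacle is handling the terms containing $\llbracket X^{3}\rrbracket$, which in the limit exists only as a \emph{space-time} distribution and not as a random variable at fixed $t$ under $\nu$. Here is where the time averaging with $h$ pays off: I would pair $\llbracket X^{3}_{M,\varepsilon}\rrbracket$ with the product $h(t)F(\varphi_{M,\varepsilon}(t))f(x)$ and bound
\begin{equation*}
\Bigl|\int h(t)\langle \llbracket X^{3}_{M,\varepsilon}\rrbracket(t),F(\varphi_{M,\varepsilon}(t))f\rangle_{\varepsilon}\,\mathrm{d}t\Bigr|\lesssim \|\llbracket X^{3}_{M,\varepsilon}\rrbracket\|_{\mathcal{C}_{T}^{-1/2-\kappa,\varepsilon}(\rho^{\sigma})}\,\|h\,F(\varphi_{M,\varepsilon})f\|_{W^{\beta,1}_{T}B^{1+3\kappa,\varepsilon}_{1,1}(\rho^{-\sigma-4})},
\end{equation*}
the second factor being controlled, through the chain rule and the hypothesis $|F|+\|\mathrm{D}F\|_{B^{1+3\kappa}_{\infty,\infty}(\rho^{-4-\sigma})}\lesssim \|\varphi\|^{n}_{H^{-1/2-2\kappa}(\rho^{2})}$, by the uniform time regularity of $\varphi_{M,\varepsilon}$ provided by Theorem~\ref{thm:phitight-int}. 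This is precisely why the exponents in that theorem have been tuned as they are. A standard limit procedure, using the coupling with the Gaussian free field constructed in the tightness argument, then yields the claimed explicit expression for $\llbracket \varphi^{3}\rrbracket$ in \eqref{eq:phi3} and finishes the proof.
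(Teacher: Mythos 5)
Your overall strategy matches the paper's: lattice integration by parts for $\nu_{M,\varepsilon}$, stationarity of the dynamics to convert the cubic term into a time average against a test function $h$, the decomposition $\varphi_{M,\varepsilon}=X_{M,\varepsilon}+Y_{M,\varepsilon}+\phi_{M,\varepsilon}$ with the paracontrolled ansatz $\chi_{M,\varepsilon}$, and the uniform time-regularity of $\varphi_{M,\varepsilon}$ from Theorem~\ref{thm:phitight} to control the $\llbracket X^3\rrbracket$ contribution. This is essentially the same route as the paper's Section~\ref{s:sd}.

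However, there is a genuine gap in your displayed estimate for the $\llbracket X^3_{M,\varepsilon}\rrbracket$ term. You bound the pairing by $\|\llbracket X^3_{M,\varepsilon}\rrbracket\|_{C_T\CC^{-1/2-\kappa,\varepsilon}(\rho^\sigma)}$, but this norm is \emph{not} uniformly bounded in $M,\varepsilon$: the paper stresses that $\llbracket X^3_{M,\varepsilon}\rrbracket$ can only be constructed as a space-time distribution, and the uniform bound (and hence convergence) holds in $W^{-\kappa,\infty}_T\CC^{-3/2-\kappa,\varepsilon}(\rho^\sigma)$, i.e.\ with negative time regularity and spatial regularity $-3/2-\kappa$, not $-1/2-\kappa$. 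Writing a $C_T$-type norm here reintroduces exactly the obstruction you earlier acknowledged. The correct pairing, as in Lemma~\ref{lemma:IF}, separates the scalar-in-time factor $t\mapsto F(\varphi(t))$ from the spatial distribution: one controls $\|t\mapsto F(\varphi(t))\|_{W^{\kappa,1}_T}$ against $\|\llbracket X^3\rrbracket\|_{W^{-\kappa,\infty}_T\CC^{-3/2-\kappa}(\rho^\sigma)}$, and the $W^{\kappa,1}_T$ norm of $F(\varphi(\cdot))$ is then bounded via the chain rule and the hypothesis on $\mathD F$ by $\|\varphi\|_{W^{\beta,1}_T B^{-1-3\kappa}_{1,1}(\rho^{4+\sigma})}$. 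There is no need (and it is not correct) to place a spatial Besov norm like $B^{1+3\kappa}_{1,1}(\rho^{-4-\sigma})$ on the product $h\,F(\varphi)\,f$, since $F(\varphi(t))$ is just a scalar. You should also make explicit that the limit is taken along the subsequence of joint laws of $(\mathcal{E}^\varepsilon\varphi_{M,\varepsilon},\mathcal{E}^\varepsilon X_{M,\varepsilon})$ in the product space identified in Corollary~\ref{cor:t}, so that the auxiliary stochastic objects $\mathbb{X}$ and the derived quantities $Y,\phi,\zeta,\chi$ are well-defined measurable functions of $(\varphi,X)$ under the limiting measure $\mu$; this is what makes the explicit formula \eqref{eq:phi3} for $\llbracket\varphi^3\rrbracket$ meaningful.
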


In addition, we are able to characterize $\mathcal{J}_{\nu}(F)$ in the spirit of the operator product expansion, see Lemma \ref{lem:OPE}.

\section{Construction of the Euclidean $\Phi^4$ field theory}

\label{sec:tight}This section is devoted to our main result. More precisely,
we consider \eqref{eq:moll} which is a discrete approximation of {\eqref{eq:P4}} posed on a periodic
lattice $\Lambda_{M, \varepsilon}$. For every $\varepsilon \in (0, 1)$
and $M > 0$ \eqref{eq:moll} possesses a unique invariant measure that is the Gibbs measure $\nu_{M,\varepsilon}$ given by \eqref{eq:gibbs}.
We derive new estimates on stationary solutions sampled from these measures
which hold true uniformly in $\varepsilon$ and $M$. As a consequence, we
obtain tightness of the invariant measures while sending both the mesh size as
well as the volume to their respective limits, i.e. $\varepsilon \rightarrow
0$, $M \rightarrow \infty$.

\subsection{Stochastic terms}

\label{ssec:stoch}
Recall that the stochastic objects $X_{M,\varepsilon},\llbracket X_{M,\varepsilon}^{2}\rrbracket, \llbracket X_{M,\varepsilon}^{3}\rrbracket$ and $X_{M, \varepsilon}^{\!\resizebox{0.6em}{!}{
\begin{tikzpicture}
\pgfpathmoveto{\pgfqpoint{0cm}{-0.035cm}}
\pgfpathlineto{\pgfqpoint{1.376cm}{-0.035cm}}
\pgfpathlineto{\pgfqpoint{1.376cm}{1.552cm}}
\pgfpathlineto{\pgfqpoint{0cm}{1.552cm}}
\pgfpathclose
\pgfusepath{clip}
\begin{pgfscope}
\begin{pgfscope}
\pgfpathmoveto{\pgfqpoint{0cm}{-0.035cm}}
\pgfpathlineto{\pgfqpoint{1.376cm}{-0.035cm}}
\pgfpathlineto{\pgfqpoint{1.376cm}{1.552cm}}
\pgfpathlineto{\pgfqpoint{0cm}{1.552cm}}
\pgfpathclose
\pgfusepath{clip}
\begin{pgfscope}
\begin{pgfscope}
\pgfsetdash{}{0cm}
\pgfsetlinewidth{0.818mm}
\pgfsetroundcap
\pgfsetroundjoin
\pgfsetmiterlimit{7.0}
\definecolor{eps2pgf_color}{gray}{0}\pgfsetstrokecolor{eps2pgf_color}\pgfsetfillcolor{eps2pgf_color}
\pgfpathmoveto{\pgfqpoint{0.117cm}{1.421cm}}
\pgfpathlineto{\pgfqpoint{0.682cm}{0.671cm}}
\pgfpathlineto{\pgfqpoint{1.246cm}{1.421cm}}
\pgfusepath{stroke}
\end{pgfscope}
\definecolor{eps2pgf_color}{gray}{0}\pgfsetstrokecolor{eps2pgf_color}\pgfsetfillcolor{eps2pgf_color}
\pgfpathmoveto{\pgfqpoint{0.273cm}{1.395cm}}
\pgfpathcurveto{\pgfqpoint{0.273cm}{1.432cm}}{\pgfqpoint{0.259cm}{1.467cm}}{\pgfqpoint{0.233cm}{1.492cm}}
\pgfpathcurveto{\pgfqpoint{0.207cm}{1.518cm}}{\pgfqpoint{0.173cm}{1.532cm}}{\pgfqpoint{0.137cm}{1.532cm}}
\pgfpathcurveto{\pgfqpoint{0.1cm}{1.532cm}}{\pgfqpoint{0.066cm}{1.518cm}}{\pgfqpoint{0.04cm}{1.492cm}}
\pgfpathcurveto{\pgfqpoint{0.014cm}{1.467cm}}{\pgfqpoint{0cm}{1.432cm}}{\pgfqpoint{0cm}{1.395cm}}
\pgfpathcurveto{\pgfqpoint{0cm}{1.359cm}}{\pgfqpoint{0.014cm}{1.324cm}}{\pgfqpoint{0.04cm}{1.299cm}}
\pgfpathcurveto{\pgfqpoint{0.066cm}{1.273cm}}{\pgfqpoint{0.1cm}{1.258cm}}{\pgfqpoint{0.137cm}{1.258cm}}
\pgfpathcurveto{\pgfqpoint{0.173cm}{1.258cm}}{\pgfqpoint{0.207cm}{1.273cm}}{\pgfqpoint{0.233cm}{1.299cm}}
\pgfpathcurveto{\pgfqpoint{0.259cm}{1.324cm}}{\pgfqpoint{0.273cm}{1.359cm}}{\pgfqpoint{0.273cm}{1.395cm}}
\pgfusepath{fill}
\begin{pgfscope}
\pgfsetdash{}{0cm}
\pgfsetlinewidth{0.818mm}
\pgfsetmiterlimit{7.0}
\pgfpathmoveto{\pgfqpoint{0.682cm}{0.671cm}}
\pgfpathlineto{\pgfqpoint{0.679cm}{1.418cm}}
\pgfusepath{stroke}
\end{pgfscope}
\pgfpathmoveto{\pgfqpoint{0.815cm}{1.399cm}}
\pgfpathcurveto{\pgfqpoint{0.815cm}{1.435cm}}{\pgfqpoint{0.801cm}{1.47cm}}{\pgfqpoint{0.775cm}{1.496cm}}
\pgfpathcurveto{\pgfqpoint{0.75cm}{1.521cm}}{\pgfqpoint{0.715cm}{1.536cm}}{\pgfqpoint{0.679cm}{1.536cm}}
\pgfpathcurveto{\pgfqpoint{0.643cm}{1.536cm}}{\pgfqpoint{0.608cm}{1.521cm}}{\pgfqpoint{0.582cm}{1.496cm}}
\pgfpathcurveto{\pgfqpoint{0.557cm}{1.47cm}}{\pgfqpoint{0.542cm}{1.435cm}}{\pgfqpoint{0.542cm}{1.399cm}}
\pgfpathcurveto{\pgfqpoint{0.542cm}{1.363cm}}{\pgfqpoint{0.557cm}{1.328cm}}{\pgfqpoint{0.582cm}{1.302cm}}
\pgfpathcurveto{\pgfqpoint{0.608cm}{1.276cm}}{\pgfqpoint{0.643cm}{1.262cm}}{\pgfqpoint{0.679cm}{1.262cm}}
\pgfpathcurveto{\pgfqpoint{0.715cm}{1.262cm}}{\pgfqpoint{0.75cm}{1.276cm}}{\pgfqpoint{0.775cm}{1.302cm}}
\pgfpathcurveto{\pgfqpoint{0.801cm}{1.328cm}}{\pgfqpoint{0.815cm}{1.363cm}}{\pgfqpoint{0.815cm}{1.399cm}}
\pgfusepath{fill}
\pgfpathmoveto{\pgfqpoint{1.345cm}{1.371cm}}
\pgfpathcurveto{\pgfqpoint{1.345cm}{1.408cm}}{\pgfqpoint{1.331cm}{1.442cm}}{\pgfqpoint{1.305cm}{1.468cm}}
\pgfpathcurveto{\pgfqpoint{1.28cm}{1.494cm}}{\pgfqpoint{1.245cm}{1.508cm}}{\pgfqpoint{1.209cm}{1.508cm}}
\pgfpathcurveto{\pgfqpoint{1.172cm}{1.508cm}}{\pgfqpoint{1.138cm}{1.494cm}}{\pgfqpoint{1.112cm}{1.468cm}}
\pgfpathcurveto{\pgfqpoint{1.087cm}{1.442cm}}{\pgfqpoint{1.072cm}{1.408cm}}{\pgfqpoint{1.072cm}{1.371cm}}
\pgfpathcurveto{\pgfqpoint{1.072cm}{1.335cm}}{\pgfqpoint{1.087cm}{1.3cm}}{\pgfqpoint{1.112cm}{1.274cm}}
\pgfpathcurveto{\pgfqpoint{1.138cm}{1.249cm}}{\pgfqpoint{1.172cm}{1.234cm}}{\pgfqpoint{1.209cm}{1.234cm}}
\pgfpathcurveto{\pgfqpoint{1.245cm}{1.234cm}}{\pgfqpoint{1.28cm}{1.249cm}}{\pgfqpoint{1.305cm}{1.274cm}}
\pgfpathcurveto{\pgfqpoint{1.331cm}{1.3cm}}{\pgfqpoint{1.345cm}{1.335cm}}{\pgfqpoint{1.345cm}{1.371cm}}
\pgfusepath{fill}
\begin{pgfscope}
\pgfsetdash{}{0cm}
\pgfsetlinewidth{0.818mm}
\pgfsetroundcap
\pgfsetmiterlimit{4.0}
\pgfpathmoveto{\pgfqpoint{0.682cm}{0.671cm}}
\pgfpathlineto{\pgfqpoint{0.682cm}{0.042cm}}
\pgfusepath{stroke}
\end{pgfscope}
\end{pgfscope}
\end{pgfscope}
\end{pgfscope}
\end{tikzpicture}}}$ were already defined in \eqref{eq:X}, \eqref{eq:X2X3} and \eqref{eq:Xt31}.
As the next step we provide further details and construct additional stochastic objects needed in the sequel. All the distributions on
$\Lambda_{M, \varepsilon}$ are extended periodically to the full lattice
$\Lambda_{\varepsilon}$. Then $X_{M, \varepsilon}^{\!\resizebox{0.6em}{!}{
\begin{tikzpicture}
\pgfpathmoveto{\pgfqpoint{0cm}{-0.035cm}}
\pgfpathlineto{\pgfqpoint{1.376cm}{-0.035cm}}
\pgfpathlineto{\pgfqpoint{1.376cm}{1.552cm}}
\pgfpathlineto{\pgfqpoint{0cm}{1.552cm}}
\pgfpathclose
\pgfusepath{clip}
\begin{pgfscope}
\begin{pgfscope}
\pgfpathmoveto{\pgfqpoint{0cm}{-0.035cm}}
\pgfpathlineto{\pgfqpoint{1.376cm}{-0.035cm}}
\pgfpathlineto{\pgfqpoint{1.376cm}{1.552cm}}
\pgfpathlineto{\pgfqpoint{0cm}{1.552cm}}
\pgfpathclose
\pgfusepath{clip}
\begin{pgfscope}
\begin{pgfscope}
\pgfsetdash{}{0cm}
\pgfsetlinewidth{0.818mm}
\pgfsetroundcap
\pgfsetroundjoin
\pgfsetmiterlimit{7.0}
\definecolor{eps2pgf_color}{gray}{0}\pgfsetstrokecolor{eps2pgf_color}\pgfsetfillcolor{eps2pgf_color}
\pgfpathmoveto{\pgfqpoint{0.117cm}{1.421cm}}
\pgfpathlineto{\pgfqpoint{0.682cm}{0.671cm}}
\pgfpathlineto{\pgfqpoint{1.246cm}{1.421cm}}
\pgfusepath{stroke}
\end{pgfscope}
\definecolor{eps2pgf_color}{gray}{0}\pgfsetstrokecolor{eps2pgf_color}\pgfsetfillcolor{eps2pgf_color}
\pgfpathmoveto{\pgfqpoint{0.273cm}{1.395cm}}
\pgfpathcurveto{\pgfqpoint{0.273cm}{1.432cm}}{\pgfqpoint{0.259cm}{1.467cm}}{\pgfqpoint{0.233cm}{1.492cm}}
\pgfpathcurveto{\pgfqpoint{0.207cm}{1.518cm}}{\pgfqpoint{0.173cm}{1.532cm}}{\pgfqpoint{0.137cm}{1.532cm}}
\pgfpathcurveto{\pgfqpoint{0.1cm}{1.532cm}}{\pgfqpoint{0.066cm}{1.518cm}}{\pgfqpoint{0.04cm}{1.492cm}}
\pgfpathcurveto{\pgfqpoint{0.014cm}{1.467cm}}{\pgfqpoint{0cm}{1.432cm}}{\pgfqpoint{0cm}{1.395cm}}
\pgfpathcurveto{\pgfqpoint{0cm}{1.359cm}}{\pgfqpoint{0.014cm}{1.324cm}}{\pgfqpoint{0.04cm}{1.299cm}}
\pgfpathcurveto{\pgfqpoint{0.066cm}{1.273cm}}{\pgfqpoint{0.1cm}{1.258cm}}{\pgfqpoint{0.137cm}{1.258cm}}
\pgfpathcurveto{\pgfqpoint{0.173cm}{1.258cm}}{\pgfqpoint{0.207cm}{1.273cm}}{\pgfqpoint{0.233cm}{1.299cm}}
\pgfpathcurveto{\pgfqpoint{0.259cm}{1.324cm}}{\pgfqpoint{0.273cm}{1.359cm}}{\pgfqpoint{0.273cm}{1.395cm}}
\pgfusepath{fill}
\begin{pgfscope}
\pgfsetdash{}{0cm}
\pgfsetlinewidth{0.818mm}
\pgfsetmiterlimit{7.0}
\pgfpathmoveto{\pgfqpoint{0.682cm}{0.671cm}}
\pgfpathlineto{\pgfqpoint{0.679cm}{1.418cm}}
\pgfusepath{stroke}
\end{pgfscope}
\pgfpathmoveto{\pgfqpoint{0.815cm}{1.399cm}}
\pgfpathcurveto{\pgfqpoint{0.815cm}{1.435cm}}{\pgfqpoint{0.801cm}{1.47cm}}{\pgfqpoint{0.775cm}{1.496cm}}
\pgfpathcurveto{\pgfqpoint{0.75cm}{1.521cm}}{\pgfqpoint{0.715cm}{1.536cm}}{\pgfqpoint{0.679cm}{1.536cm}}
\pgfpathcurveto{\pgfqpoint{0.643cm}{1.536cm}}{\pgfqpoint{0.608cm}{1.521cm}}{\pgfqpoint{0.582cm}{1.496cm}}
\pgfpathcurveto{\pgfqpoint{0.557cm}{1.47cm}}{\pgfqpoint{0.542cm}{1.435cm}}{\pgfqpoint{0.542cm}{1.399cm}}
\pgfpathcurveto{\pgfqpoint{0.542cm}{1.363cm}}{\pgfqpoint{0.557cm}{1.328cm}}{\pgfqpoint{0.582cm}{1.302cm}}
\pgfpathcurveto{\pgfqpoint{0.608cm}{1.276cm}}{\pgfqpoint{0.643cm}{1.262cm}}{\pgfqpoint{0.679cm}{1.262cm}}
\pgfpathcurveto{\pgfqpoint{0.715cm}{1.262cm}}{\pgfqpoint{0.75cm}{1.276cm}}{\pgfqpoint{0.775cm}{1.302cm}}
\pgfpathcurveto{\pgfqpoint{0.801cm}{1.328cm}}{\pgfqpoint{0.815cm}{1.363cm}}{\pgfqpoint{0.815cm}{1.399cm}}
\pgfusepath{fill}
\pgfpathmoveto{\pgfqpoint{1.345cm}{1.371cm}}
\pgfpathcurveto{\pgfqpoint{1.345cm}{1.408cm}}{\pgfqpoint{1.331cm}{1.442cm}}{\pgfqpoint{1.305cm}{1.468cm}}
\pgfpathcurveto{\pgfqpoint{1.28cm}{1.494cm}}{\pgfqpoint{1.245cm}{1.508cm}}{\pgfqpoint{1.209cm}{1.508cm}}
\pgfpathcurveto{\pgfqpoint{1.172cm}{1.508cm}}{\pgfqpoint{1.138cm}{1.494cm}}{\pgfqpoint{1.112cm}{1.468cm}}
\pgfpathcurveto{\pgfqpoint{1.087cm}{1.442cm}}{\pgfqpoint{1.072cm}{1.408cm}}{\pgfqpoint{1.072cm}{1.371cm}}
\pgfpathcurveto{\pgfqpoint{1.072cm}{1.335cm}}{\pgfqpoint{1.087cm}{1.3cm}}{\pgfqpoint{1.112cm}{1.274cm}}
\pgfpathcurveto{\pgfqpoint{1.138cm}{1.249cm}}{\pgfqpoint{1.172cm}{1.234cm}}{\pgfqpoint{1.209cm}{1.234cm}}
\pgfpathcurveto{\pgfqpoint{1.245cm}{1.234cm}}{\pgfqpoint{1.28cm}{1.249cm}}{\pgfqpoint{1.305cm}{1.274cm}}
\pgfpathcurveto{\pgfqpoint{1.331cm}{1.3cm}}{\pgfqpoint{1.345cm}{1.335cm}}{\pgfqpoint{1.345cm}{1.371cm}}
\pgfusepath{fill}
\begin{pgfscope}
\pgfsetdash{}{0cm}
\pgfsetlinewidth{0.818mm}
\pgfsetroundcap
\pgfsetmiterlimit{4.0}
\pgfpathmoveto{\pgfqpoint{0.682cm}{0.671cm}}
\pgfpathlineto{\pgfqpoint{0.682cm}{0.042cm}}
\pgfusepath{stroke}
\end{pgfscope}
\end{pgfscope}
\end{pgfscope}
\end{pgfscope}
\end{tikzpicture}}}$ which is a stationary solution to \eqref{eq:Xt31}
satisfies $X_{M, \varepsilon}^{\!\resizebox{0.6em}{!}{
\begin{tikzpicture}
\pgfpathmoveto{\pgfqpoint{0cm}{-0.035cm}}
\pgfpathlineto{\pgfqpoint{1.376cm}{-0.035cm}}
\pgfpathlineto{\pgfqpoint{1.376cm}{1.552cm}}
\pgfpathlineto{\pgfqpoint{0cm}{1.552cm}}
\pgfpathclose
\pgfusepath{clip}
\begin{pgfscope}
\begin{pgfscope}
\pgfpathmoveto{\pgfqpoint{0cm}{-0.035cm}}
\pgfpathlineto{\pgfqpoint{1.376cm}{-0.035cm}}
\pgfpathlineto{\pgfqpoint{1.376cm}{1.552cm}}
\pgfpathlineto{\pgfqpoint{0cm}{1.552cm}}
\pgfpathclose
\pgfusepath{clip}
\begin{pgfscope}
\begin{pgfscope}
\pgfsetdash{}{0cm}
\pgfsetlinewidth{0.818mm}
\pgfsetroundcap
\pgfsetroundjoin
\pgfsetmiterlimit{7.0}
\definecolor{eps2pgf_color}{gray}{0}\pgfsetstrokecolor{eps2pgf_color}\pgfsetfillcolor{eps2pgf_color}
\pgfpathmoveto{\pgfqpoint{0.117cm}{1.421cm}}
\pgfpathlineto{\pgfqpoint{0.682cm}{0.671cm}}
\pgfpathlineto{\pgfqpoint{1.246cm}{1.421cm}}
\pgfusepath{stroke}
\end{pgfscope}
\definecolor{eps2pgf_color}{gray}{0}\pgfsetstrokecolor{eps2pgf_color}\pgfsetfillcolor{eps2pgf_color}
\pgfpathmoveto{\pgfqpoint{0.273cm}{1.395cm}}
\pgfpathcurveto{\pgfqpoint{0.273cm}{1.432cm}}{\pgfqpoint{0.259cm}{1.467cm}}{\pgfqpoint{0.233cm}{1.492cm}}
\pgfpathcurveto{\pgfqpoint{0.207cm}{1.518cm}}{\pgfqpoint{0.173cm}{1.532cm}}{\pgfqpoint{0.137cm}{1.532cm}}
\pgfpathcurveto{\pgfqpoint{0.1cm}{1.532cm}}{\pgfqpoint{0.066cm}{1.518cm}}{\pgfqpoint{0.04cm}{1.492cm}}
\pgfpathcurveto{\pgfqpoint{0.014cm}{1.467cm}}{\pgfqpoint{0cm}{1.432cm}}{\pgfqpoint{0cm}{1.395cm}}
\pgfpathcurveto{\pgfqpoint{0cm}{1.359cm}}{\pgfqpoint{0.014cm}{1.324cm}}{\pgfqpoint{0.04cm}{1.299cm}}
\pgfpathcurveto{\pgfqpoint{0.066cm}{1.273cm}}{\pgfqpoint{0.1cm}{1.258cm}}{\pgfqpoint{0.137cm}{1.258cm}}
\pgfpathcurveto{\pgfqpoint{0.173cm}{1.258cm}}{\pgfqpoint{0.207cm}{1.273cm}}{\pgfqpoint{0.233cm}{1.299cm}}
\pgfpathcurveto{\pgfqpoint{0.259cm}{1.324cm}}{\pgfqpoint{0.273cm}{1.359cm}}{\pgfqpoint{0.273cm}{1.395cm}}
\pgfusepath{fill}
\begin{pgfscope}
\pgfsetdash{}{0cm}
\pgfsetlinewidth{0.818mm}
\pgfsetmiterlimit{7.0}
\pgfpathmoveto{\pgfqpoint{0.682cm}{0.671cm}}
\pgfpathlineto{\pgfqpoint{0.679cm}{1.418cm}}
\pgfusepath{stroke}
\end{pgfscope}
\pgfpathmoveto{\pgfqpoint{0.815cm}{1.399cm}}
\pgfpathcurveto{\pgfqpoint{0.815cm}{1.435cm}}{\pgfqpoint{0.801cm}{1.47cm}}{\pgfqpoint{0.775cm}{1.496cm}}
\pgfpathcurveto{\pgfqpoint{0.75cm}{1.521cm}}{\pgfqpoint{0.715cm}{1.536cm}}{\pgfqpoint{0.679cm}{1.536cm}}
\pgfpathcurveto{\pgfqpoint{0.643cm}{1.536cm}}{\pgfqpoint{0.608cm}{1.521cm}}{\pgfqpoint{0.582cm}{1.496cm}}
\pgfpathcurveto{\pgfqpoint{0.557cm}{1.47cm}}{\pgfqpoint{0.542cm}{1.435cm}}{\pgfqpoint{0.542cm}{1.399cm}}
\pgfpathcurveto{\pgfqpoint{0.542cm}{1.363cm}}{\pgfqpoint{0.557cm}{1.328cm}}{\pgfqpoint{0.582cm}{1.302cm}}
\pgfpathcurveto{\pgfqpoint{0.608cm}{1.276cm}}{\pgfqpoint{0.643cm}{1.262cm}}{\pgfqpoint{0.679cm}{1.262cm}}
\pgfpathcurveto{\pgfqpoint{0.715cm}{1.262cm}}{\pgfqpoint{0.75cm}{1.276cm}}{\pgfqpoint{0.775cm}{1.302cm}}
\pgfpathcurveto{\pgfqpoint{0.801cm}{1.328cm}}{\pgfqpoint{0.815cm}{1.363cm}}{\pgfqpoint{0.815cm}{1.399cm}}
\pgfusepath{fill}
\pgfpathmoveto{\pgfqpoint{1.345cm}{1.371cm}}
\pgfpathcurveto{\pgfqpoint{1.345cm}{1.408cm}}{\pgfqpoint{1.331cm}{1.442cm}}{\pgfqpoint{1.305cm}{1.468cm}}
\pgfpathcurveto{\pgfqpoint{1.28cm}{1.494cm}}{\pgfqpoint{1.245cm}{1.508cm}}{\pgfqpoint{1.209cm}{1.508cm}}
\pgfpathcurveto{\pgfqpoint{1.172cm}{1.508cm}}{\pgfqpoint{1.138cm}{1.494cm}}{\pgfqpoint{1.112cm}{1.468cm}}
\pgfpathcurveto{\pgfqpoint{1.087cm}{1.442cm}}{\pgfqpoint{1.072cm}{1.408cm}}{\pgfqpoint{1.072cm}{1.371cm}}
\pgfpathcurveto{\pgfqpoint{1.072cm}{1.335cm}}{\pgfqpoint{1.087cm}{1.3cm}}{\pgfqpoint{1.112cm}{1.274cm}}
\pgfpathcurveto{\pgfqpoint{1.138cm}{1.249cm}}{\pgfqpoint{1.172cm}{1.234cm}}{\pgfqpoint{1.209cm}{1.234cm}}
\pgfpathcurveto{\pgfqpoint{1.245cm}{1.234cm}}{\pgfqpoint{1.28cm}{1.249cm}}{\pgfqpoint{1.305cm}{1.274cm}}
\pgfpathcurveto{\pgfqpoint{1.331cm}{1.3cm}}{\pgfqpoint{1.345cm}{1.335cm}}{\pgfqpoint{1.345cm}{1.371cm}}
\pgfusepath{fill}
\begin{pgfscope}
\pgfsetdash{}{0cm}
\pgfsetlinewidth{0.818mm}
\pgfsetroundcap
\pgfsetmiterlimit{4.0}
\pgfpathmoveto{\pgfqpoint{0.682cm}{0.671cm}}
\pgfpathlineto{\pgfqpoint{0.682cm}{0.042cm}}
\pgfusepath{stroke}
\end{pgfscope}
\end{pgfscope}
\end{pgfscope}
\end{pgfscope}
\end{tikzpicture}}}(t) =
P^{\varepsilon}_{t}X_{M, \varepsilon}^{\!\resizebox{0.6em}{!}{
\begin{tikzpicture}
\pgfpathmoveto{\pgfqpoint{0cm}{-0.035cm}}
\pgfpathlineto{\pgfqpoint{1.376cm}{-0.035cm}}
\pgfpathlineto{\pgfqpoint{1.376cm}{1.552cm}}
\pgfpathlineto{\pgfqpoint{0cm}{1.552cm}}
\pgfpathclose
\pgfusepath{clip}
\begin{pgfscope}
\begin{pgfscope}
\pgfpathmoveto{\pgfqpoint{0cm}{-0.035cm}}
\pgfpathlineto{\pgfqpoint{1.376cm}{-0.035cm}}
\pgfpathlineto{\pgfqpoint{1.376cm}{1.552cm}}
\pgfpathlineto{\pgfqpoint{0cm}{1.552cm}}
\pgfpathclose
\pgfusepath{clip}
\begin{pgfscope}
\begin{pgfscope}
\pgfsetdash{}{0cm}
\pgfsetlinewidth{0.818mm}
\pgfsetroundcap
\pgfsetroundjoin
\pgfsetmiterlimit{7.0}
\definecolor{eps2pgf_color}{gray}{0}\pgfsetstrokecolor{eps2pgf_color}\pgfsetfillcolor{eps2pgf_color}
\pgfpathmoveto{\pgfqpoint{0.117cm}{1.421cm}}
\pgfpathlineto{\pgfqpoint{0.682cm}{0.671cm}}
\pgfpathlineto{\pgfqpoint{1.246cm}{1.421cm}}
\pgfusepath{stroke}
\end{pgfscope}
\definecolor{eps2pgf_color}{gray}{0}\pgfsetstrokecolor{eps2pgf_color}\pgfsetfillcolor{eps2pgf_color}
\pgfpathmoveto{\pgfqpoint{0.273cm}{1.395cm}}
\pgfpathcurveto{\pgfqpoint{0.273cm}{1.432cm}}{\pgfqpoint{0.259cm}{1.467cm}}{\pgfqpoint{0.233cm}{1.492cm}}
\pgfpathcurveto{\pgfqpoint{0.207cm}{1.518cm}}{\pgfqpoint{0.173cm}{1.532cm}}{\pgfqpoint{0.137cm}{1.532cm}}
\pgfpathcurveto{\pgfqpoint{0.1cm}{1.532cm}}{\pgfqpoint{0.066cm}{1.518cm}}{\pgfqpoint{0.04cm}{1.492cm}}
\pgfpathcurveto{\pgfqpoint{0.014cm}{1.467cm}}{\pgfqpoint{0cm}{1.432cm}}{\pgfqpoint{0cm}{1.395cm}}
\pgfpathcurveto{\pgfqpoint{0cm}{1.359cm}}{\pgfqpoint{0.014cm}{1.324cm}}{\pgfqpoint{0.04cm}{1.299cm}}
\pgfpathcurveto{\pgfqpoint{0.066cm}{1.273cm}}{\pgfqpoint{0.1cm}{1.258cm}}{\pgfqpoint{0.137cm}{1.258cm}}
\pgfpathcurveto{\pgfqpoint{0.173cm}{1.258cm}}{\pgfqpoint{0.207cm}{1.273cm}}{\pgfqpoint{0.233cm}{1.299cm}}
\pgfpathcurveto{\pgfqpoint{0.259cm}{1.324cm}}{\pgfqpoint{0.273cm}{1.359cm}}{\pgfqpoint{0.273cm}{1.395cm}}
\pgfusepath{fill}
\begin{pgfscope}
\pgfsetdash{}{0cm}
\pgfsetlinewidth{0.818mm}
\pgfsetmiterlimit{7.0}
\pgfpathmoveto{\pgfqpoint{0.682cm}{0.671cm}}
\pgfpathlineto{\pgfqpoint{0.679cm}{1.418cm}}
\pgfusepath{stroke}
\end{pgfscope}
\pgfpathmoveto{\pgfqpoint{0.815cm}{1.399cm}}
\pgfpathcurveto{\pgfqpoint{0.815cm}{1.435cm}}{\pgfqpoint{0.801cm}{1.47cm}}{\pgfqpoint{0.775cm}{1.496cm}}
\pgfpathcurveto{\pgfqpoint{0.75cm}{1.521cm}}{\pgfqpoint{0.715cm}{1.536cm}}{\pgfqpoint{0.679cm}{1.536cm}}
\pgfpathcurveto{\pgfqpoint{0.643cm}{1.536cm}}{\pgfqpoint{0.608cm}{1.521cm}}{\pgfqpoint{0.582cm}{1.496cm}}
\pgfpathcurveto{\pgfqpoint{0.557cm}{1.47cm}}{\pgfqpoint{0.542cm}{1.435cm}}{\pgfqpoint{0.542cm}{1.399cm}}
\pgfpathcurveto{\pgfqpoint{0.542cm}{1.363cm}}{\pgfqpoint{0.557cm}{1.328cm}}{\pgfqpoint{0.582cm}{1.302cm}}
\pgfpathcurveto{\pgfqpoint{0.608cm}{1.276cm}}{\pgfqpoint{0.643cm}{1.262cm}}{\pgfqpoint{0.679cm}{1.262cm}}
\pgfpathcurveto{\pgfqpoint{0.715cm}{1.262cm}}{\pgfqpoint{0.75cm}{1.276cm}}{\pgfqpoint{0.775cm}{1.302cm}}
\pgfpathcurveto{\pgfqpoint{0.801cm}{1.328cm}}{\pgfqpoint{0.815cm}{1.363cm}}{\pgfqpoint{0.815cm}{1.399cm}}
\pgfusepath{fill}
\pgfpathmoveto{\pgfqpoint{1.345cm}{1.371cm}}
\pgfpathcurveto{\pgfqpoint{1.345cm}{1.408cm}}{\pgfqpoint{1.331cm}{1.442cm}}{\pgfqpoint{1.305cm}{1.468cm}}
\pgfpathcurveto{\pgfqpoint{1.28cm}{1.494cm}}{\pgfqpoint{1.245cm}{1.508cm}}{\pgfqpoint{1.209cm}{1.508cm}}
\pgfpathcurveto{\pgfqpoint{1.172cm}{1.508cm}}{\pgfqpoint{1.138cm}{1.494cm}}{\pgfqpoint{1.112cm}{1.468cm}}
\pgfpathcurveto{\pgfqpoint{1.087cm}{1.442cm}}{\pgfqpoint{1.072cm}{1.408cm}}{\pgfqpoint{1.072cm}{1.371cm}}
\pgfpathcurveto{\pgfqpoint{1.072cm}{1.335cm}}{\pgfqpoint{1.087cm}{1.3cm}}{\pgfqpoint{1.112cm}{1.274cm}}
\pgfpathcurveto{\pgfqpoint{1.138cm}{1.249cm}}{\pgfqpoint{1.172cm}{1.234cm}}{\pgfqpoint{1.209cm}{1.234cm}}
\pgfpathcurveto{\pgfqpoint{1.245cm}{1.234cm}}{\pgfqpoint{1.28cm}{1.249cm}}{\pgfqpoint{1.305cm}{1.274cm}}
\pgfpathcurveto{\pgfqpoint{1.331cm}{1.3cm}}{\pgfqpoint{1.345cm}{1.335cm}}{\pgfqpoint{1.345cm}{1.371cm}}
\pgfusepath{fill}
\begin{pgfscope}
\pgfsetdash{}{0cm}
\pgfsetlinewidth{0.818mm}
\pgfsetroundcap
\pgfsetmiterlimit{4.0}
\pgfpathmoveto{\pgfqpoint{0.682cm}{0.671cm}}
\pgfpathlineto{\pgfqpoint{0.682cm}{0.042cm}}
\pgfusepath{stroke}
\end{pgfscope}
\end{pgfscope}
\end{pgfscope}
\end{pgfscope}
\end{tikzpicture}}} (0) + \LL_{\varepsilon}^{- 1} \llbracket X_{M,
\varepsilon}^3 \rrbracket$ with $X_{M, \varepsilon}^{\!\resizebox{0.6em}{!}{
\begin{tikzpicture}
\pgfpathmoveto{\pgfqpoint{0cm}{-0.035cm}}
\pgfpathlineto{\pgfqpoint{1.376cm}{-0.035cm}}
\pgfpathlineto{\pgfqpoint{1.376cm}{1.552cm}}
\pgfpathlineto{\pgfqpoint{0cm}{1.552cm}}
\pgfpathclose
\pgfusepath{clip}
\begin{pgfscope}
\begin{pgfscope}
\pgfpathmoveto{\pgfqpoint{0cm}{-0.035cm}}
\pgfpathlineto{\pgfqpoint{1.376cm}{-0.035cm}}
\pgfpathlineto{\pgfqpoint{1.376cm}{1.552cm}}
\pgfpathlineto{\pgfqpoint{0cm}{1.552cm}}
\pgfpathclose
\pgfusepath{clip}
\begin{pgfscope}
\begin{pgfscope}
\pgfsetdash{}{0cm}
\pgfsetlinewidth{0.818mm}
\pgfsetroundcap
\pgfsetroundjoin
\pgfsetmiterlimit{7.0}
\definecolor{eps2pgf_color}{gray}{0}\pgfsetstrokecolor{eps2pgf_color}\pgfsetfillcolor{eps2pgf_color}
\pgfpathmoveto{\pgfqpoint{0.117cm}{1.421cm}}
\pgfpathlineto{\pgfqpoint{0.682cm}{0.671cm}}
\pgfpathlineto{\pgfqpoint{1.246cm}{1.421cm}}
\pgfusepath{stroke}
\end{pgfscope}
\definecolor{eps2pgf_color}{gray}{0}\pgfsetstrokecolor{eps2pgf_color}\pgfsetfillcolor{eps2pgf_color}
\pgfpathmoveto{\pgfqpoint{0.273cm}{1.395cm}}
\pgfpathcurveto{\pgfqpoint{0.273cm}{1.432cm}}{\pgfqpoint{0.259cm}{1.467cm}}{\pgfqpoint{0.233cm}{1.492cm}}
\pgfpathcurveto{\pgfqpoint{0.207cm}{1.518cm}}{\pgfqpoint{0.173cm}{1.532cm}}{\pgfqpoint{0.137cm}{1.532cm}}
\pgfpathcurveto{\pgfqpoint{0.1cm}{1.532cm}}{\pgfqpoint{0.066cm}{1.518cm}}{\pgfqpoint{0.04cm}{1.492cm}}
\pgfpathcurveto{\pgfqpoint{0.014cm}{1.467cm}}{\pgfqpoint{0cm}{1.432cm}}{\pgfqpoint{0cm}{1.395cm}}
\pgfpathcurveto{\pgfqpoint{0cm}{1.359cm}}{\pgfqpoint{0.014cm}{1.324cm}}{\pgfqpoint{0.04cm}{1.299cm}}
\pgfpathcurveto{\pgfqpoint{0.066cm}{1.273cm}}{\pgfqpoint{0.1cm}{1.258cm}}{\pgfqpoint{0.137cm}{1.258cm}}
\pgfpathcurveto{\pgfqpoint{0.173cm}{1.258cm}}{\pgfqpoint{0.207cm}{1.273cm}}{\pgfqpoint{0.233cm}{1.299cm}}
\pgfpathcurveto{\pgfqpoint{0.259cm}{1.324cm}}{\pgfqpoint{0.273cm}{1.359cm}}{\pgfqpoint{0.273cm}{1.395cm}}
\pgfusepath{fill}
\begin{pgfscope}
\pgfsetdash{}{0cm}
\pgfsetlinewidth{0.818mm}
\pgfsetmiterlimit{7.0}
\pgfpathmoveto{\pgfqpoint{0.682cm}{0.671cm}}
\pgfpathlineto{\pgfqpoint{0.679cm}{1.418cm}}
\pgfusepath{stroke}
\end{pgfscope}
\pgfpathmoveto{\pgfqpoint{0.815cm}{1.399cm}}
\pgfpathcurveto{\pgfqpoint{0.815cm}{1.435cm}}{\pgfqpoint{0.801cm}{1.47cm}}{\pgfqpoint{0.775cm}{1.496cm}}
\pgfpathcurveto{\pgfqpoint{0.75cm}{1.521cm}}{\pgfqpoint{0.715cm}{1.536cm}}{\pgfqpoint{0.679cm}{1.536cm}}
\pgfpathcurveto{\pgfqpoint{0.643cm}{1.536cm}}{\pgfqpoint{0.608cm}{1.521cm}}{\pgfqpoint{0.582cm}{1.496cm}}
\pgfpathcurveto{\pgfqpoint{0.557cm}{1.47cm}}{\pgfqpoint{0.542cm}{1.435cm}}{\pgfqpoint{0.542cm}{1.399cm}}
\pgfpathcurveto{\pgfqpoint{0.542cm}{1.363cm}}{\pgfqpoint{0.557cm}{1.328cm}}{\pgfqpoint{0.582cm}{1.302cm}}
\pgfpathcurveto{\pgfqpoint{0.608cm}{1.276cm}}{\pgfqpoint{0.643cm}{1.262cm}}{\pgfqpoint{0.679cm}{1.262cm}}
\pgfpathcurveto{\pgfqpoint{0.715cm}{1.262cm}}{\pgfqpoint{0.75cm}{1.276cm}}{\pgfqpoint{0.775cm}{1.302cm}}
\pgfpathcurveto{\pgfqpoint{0.801cm}{1.328cm}}{\pgfqpoint{0.815cm}{1.363cm}}{\pgfqpoint{0.815cm}{1.399cm}}
\pgfusepath{fill}
\pgfpathmoveto{\pgfqpoint{1.345cm}{1.371cm}}
\pgfpathcurveto{\pgfqpoint{1.345cm}{1.408cm}}{\pgfqpoint{1.331cm}{1.442cm}}{\pgfqpoint{1.305cm}{1.468cm}}
\pgfpathcurveto{\pgfqpoint{1.28cm}{1.494cm}}{\pgfqpoint{1.245cm}{1.508cm}}{\pgfqpoint{1.209cm}{1.508cm}}
\pgfpathcurveto{\pgfqpoint{1.172cm}{1.508cm}}{\pgfqpoint{1.138cm}{1.494cm}}{\pgfqpoint{1.112cm}{1.468cm}}
\pgfpathcurveto{\pgfqpoint{1.087cm}{1.442cm}}{\pgfqpoint{1.072cm}{1.408cm}}{\pgfqpoint{1.072cm}{1.371cm}}
\pgfpathcurveto{\pgfqpoint{1.072cm}{1.335cm}}{\pgfqpoint{1.087cm}{1.3cm}}{\pgfqpoint{1.112cm}{1.274cm}}
\pgfpathcurveto{\pgfqpoint{1.138cm}{1.249cm}}{\pgfqpoint{1.172cm}{1.234cm}}{\pgfqpoint{1.209cm}{1.234cm}}
\pgfpathcurveto{\pgfqpoint{1.245cm}{1.234cm}}{\pgfqpoint{1.28cm}{1.249cm}}{\pgfqpoint{1.305cm}{1.274cm}}
\pgfpathcurveto{\pgfqpoint{1.331cm}{1.3cm}}{\pgfqpoint{1.345cm}{1.335cm}}{\pgfqpoint{1.345cm}{1.371cm}}
\pgfusepath{fill}
\begin{pgfscope}
\pgfsetdash{}{0cm}
\pgfsetlinewidth{0.818mm}
\pgfsetroundcap
\pgfsetmiterlimit{4.0}
\pgfpathmoveto{\pgfqpoint{0.682cm}{0.671cm}}
\pgfpathlineto{\pgfqpoint{0.682cm}{0.042cm}}
\pgfusepath{stroke}
\end{pgfscope}
\end{pgfscope}
\end{pgfscope}
\end{pgfscope}
\end{tikzpicture}}} (0) = \int_{-
\infty}^0 P^{\varepsilon}_{- s} \llbracket X_{M, \varepsilon}^3 \rrbracket (s)
\mathd s$, where $P^{\varepsilon}_t$ denotes the semigroup generated by
$-\Q_{\varepsilon}$ on $\Lambda_{\varepsilon}$. Then  for every
$\kappa, \sigma > 0$ and some $\beta > 0$ small
\[ \| X_{M, \varepsilon}^{\!\resizebox{0.6em}{!}{
\begin{tikzpicture}
\pgfpathmoveto{\pgfqpoint{0cm}{-0.035cm}}
\pgfpathlineto{\pgfqpoint{1.376cm}{-0.035cm}}
\pgfpathlineto{\pgfqpoint{1.376cm}{1.552cm}}
\pgfpathlineto{\pgfqpoint{0cm}{1.552cm}}
\pgfpathclose
\pgfusepath{clip}
\begin{pgfscope}
\begin{pgfscope}
\pgfpathmoveto{\pgfqpoint{0cm}{-0.035cm}}
\pgfpathlineto{\pgfqpoint{1.376cm}{-0.035cm}}
\pgfpathlineto{\pgfqpoint{1.376cm}{1.552cm}}
\pgfpathlineto{\pgfqpoint{0cm}{1.552cm}}
\pgfpathclose
\pgfusepath{clip}
\begin{pgfscope}
\begin{pgfscope}
\pgfsetdash{}{0cm}
\pgfsetlinewidth{0.818mm}
\pgfsetroundcap
\pgfsetroundjoin
\pgfsetmiterlimit{7.0}
\definecolor{eps2pgf_color}{gray}{0}\pgfsetstrokecolor{eps2pgf_color}\pgfsetfillcolor{eps2pgf_color}
\pgfpathmoveto{\pgfqpoint{0.117cm}{1.421cm}}
\pgfpathlineto{\pgfqpoint{0.682cm}{0.671cm}}
\pgfpathlineto{\pgfqpoint{1.246cm}{1.421cm}}
\pgfusepath{stroke}
\end{pgfscope}
\definecolor{eps2pgf_color}{gray}{0}\pgfsetstrokecolor{eps2pgf_color}\pgfsetfillcolor{eps2pgf_color}
\pgfpathmoveto{\pgfqpoint{0.273cm}{1.395cm}}
\pgfpathcurveto{\pgfqpoint{0.273cm}{1.432cm}}{\pgfqpoint{0.259cm}{1.467cm}}{\pgfqpoint{0.233cm}{1.492cm}}
\pgfpathcurveto{\pgfqpoint{0.207cm}{1.518cm}}{\pgfqpoint{0.173cm}{1.532cm}}{\pgfqpoint{0.137cm}{1.532cm}}
\pgfpathcurveto{\pgfqpoint{0.1cm}{1.532cm}}{\pgfqpoint{0.066cm}{1.518cm}}{\pgfqpoint{0.04cm}{1.492cm}}
\pgfpathcurveto{\pgfqpoint{0.014cm}{1.467cm}}{\pgfqpoint{0cm}{1.432cm}}{\pgfqpoint{0cm}{1.395cm}}
\pgfpathcurveto{\pgfqpoint{0cm}{1.359cm}}{\pgfqpoint{0.014cm}{1.324cm}}{\pgfqpoint{0.04cm}{1.299cm}}
\pgfpathcurveto{\pgfqpoint{0.066cm}{1.273cm}}{\pgfqpoint{0.1cm}{1.258cm}}{\pgfqpoint{0.137cm}{1.258cm}}
\pgfpathcurveto{\pgfqpoint{0.173cm}{1.258cm}}{\pgfqpoint{0.207cm}{1.273cm}}{\pgfqpoint{0.233cm}{1.299cm}}
\pgfpathcurveto{\pgfqpoint{0.259cm}{1.324cm}}{\pgfqpoint{0.273cm}{1.359cm}}{\pgfqpoint{0.273cm}{1.395cm}}
\pgfusepath{fill}
\begin{pgfscope}
\pgfsetdash{}{0cm}
\pgfsetlinewidth{0.818mm}
\pgfsetmiterlimit{7.0}
\pgfpathmoveto{\pgfqpoint{0.682cm}{0.671cm}}
\pgfpathlineto{\pgfqpoint{0.679cm}{1.418cm}}
\pgfusepath{stroke}
\end{pgfscope}
\pgfpathmoveto{\pgfqpoint{0.815cm}{1.399cm}}
\pgfpathcurveto{\pgfqpoint{0.815cm}{1.435cm}}{\pgfqpoint{0.801cm}{1.47cm}}{\pgfqpoint{0.775cm}{1.496cm}}
\pgfpathcurveto{\pgfqpoint{0.75cm}{1.521cm}}{\pgfqpoint{0.715cm}{1.536cm}}{\pgfqpoint{0.679cm}{1.536cm}}
\pgfpathcurveto{\pgfqpoint{0.643cm}{1.536cm}}{\pgfqpoint{0.608cm}{1.521cm}}{\pgfqpoint{0.582cm}{1.496cm}}
\pgfpathcurveto{\pgfqpoint{0.557cm}{1.47cm}}{\pgfqpoint{0.542cm}{1.435cm}}{\pgfqpoint{0.542cm}{1.399cm}}
\pgfpathcurveto{\pgfqpoint{0.542cm}{1.363cm}}{\pgfqpoint{0.557cm}{1.328cm}}{\pgfqpoint{0.582cm}{1.302cm}}
\pgfpathcurveto{\pgfqpoint{0.608cm}{1.276cm}}{\pgfqpoint{0.643cm}{1.262cm}}{\pgfqpoint{0.679cm}{1.262cm}}
\pgfpathcurveto{\pgfqpoint{0.715cm}{1.262cm}}{\pgfqpoint{0.75cm}{1.276cm}}{\pgfqpoint{0.775cm}{1.302cm}}
\pgfpathcurveto{\pgfqpoint{0.801cm}{1.328cm}}{\pgfqpoint{0.815cm}{1.363cm}}{\pgfqpoint{0.815cm}{1.399cm}}
\pgfusepath{fill}
\pgfpathmoveto{\pgfqpoint{1.345cm}{1.371cm}}
\pgfpathcurveto{\pgfqpoint{1.345cm}{1.408cm}}{\pgfqpoint{1.331cm}{1.442cm}}{\pgfqpoint{1.305cm}{1.468cm}}
\pgfpathcurveto{\pgfqpoint{1.28cm}{1.494cm}}{\pgfqpoint{1.245cm}{1.508cm}}{\pgfqpoint{1.209cm}{1.508cm}}
\pgfpathcurveto{\pgfqpoint{1.172cm}{1.508cm}}{\pgfqpoint{1.138cm}{1.494cm}}{\pgfqpoint{1.112cm}{1.468cm}}
\pgfpathcurveto{\pgfqpoint{1.087cm}{1.442cm}}{\pgfqpoint{1.072cm}{1.408cm}}{\pgfqpoint{1.072cm}{1.371cm}}
\pgfpathcurveto{\pgfqpoint{1.072cm}{1.335cm}}{\pgfqpoint{1.087cm}{1.3cm}}{\pgfqpoint{1.112cm}{1.274cm}}
\pgfpathcurveto{\pgfqpoint{1.138cm}{1.249cm}}{\pgfqpoint{1.172cm}{1.234cm}}{\pgfqpoint{1.209cm}{1.234cm}}
\pgfpathcurveto{\pgfqpoint{1.245cm}{1.234cm}}{\pgfqpoint{1.28cm}{1.249cm}}{\pgfqpoint{1.305cm}{1.274cm}}
\pgfpathcurveto{\pgfqpoint{1.331cm}{1.3cm}}{\pgfqpoint{1.345cm}{1.335cm}}{\pgfqpoint{1.345cm}{1.371cm}}
\pgfusepath{fill}
\begin{pgfscope}
\pgfsetdash{}{0cm}
\pgfsetlinewidth{0.818mm}
\pgfsetroundcap
\pgfsetmiterlimit{4.0}
\pgfpathmoveto{\pgfqpoint{0.682cm}{0.671cm}}
\pgfpathlineto{\pgfqpoint{0.682cm}{0.042cm}}
\pgfusepath{stroke}
\end{pgfscope}
\end{pgfscope}
\end{pgfscope}
\end{pgfscope}
\end{tikzpicture}}} \|_{C_T \CC^{1 / 2 - \kappa,
   \varepsilon} (\rho^{\sigma})} + \| X_{M, \varepsilon}^{\!\resizebox{0.6em}{!}{
\begin{tikzpicture}
\pgfpathmoveto{\pgfqpoint{0cm}{-0.035cm}}
\pgfpathlineto{\pgfqpoint{1.376cm}{-0.035cm}}
\pgfpathlineto{\pgfqpoint{1.376cm}{1.552cm}}
\pgfpathlineto{\pgfqpoint{0cm}{1.552cm}}
\pgfpathclose
\pgfusepath{clip}
\begin{pgfscope}
\begin{pgfscope}
\pgfpathmoveto{\pgfqpoint{0cm}{-0.035cm}}
\pgfpathlineto{\pgfqpoint{1.376cm}{-0.035cm}}
\pgfpathlineto{\pgfqpoint{1.376cm}{1.552cm}}
\pgfpathlineto{\pgfqpoint{0cm}{1.552cm}}
\pgfpathclose
\pgfusepath{clip}
\begin{pgfscope}
\begin{pgfscope}
\pgfsetdash{}{0cm}
\pgfsetlinewidth{0.818mm}
\pgfsetroundcap
\pgfsetroundjoin
\pgfsetmiterlimit{7.0}
\definecolor{eps2pgf_color}{gray}{0}\pgfsetstrokecolor{eps2pgf_color}\pgfsetfillcolor{eps2pgf_color}
\pgfpathmoveto{\pgfqpoint{0.117cm}{1.421cm}}
\pgfpathlineto{\pgfqpoint{0.682cm}{0.671cm}}
\pgfpathlineto{\pgfqpoint{1.246cm}{1.421cm}}
\pgfusepath{stroke}
\end{pgfscope}
\definecolor{eps2pgf_color}{gray}{0}\pgfsetstrokecolor{eps2pgf_color}\pgfsetfillcolor{eps2pgf_color}
\pgfpathmoveto{\pgfqpoint{0.273cm}{1.395cm}}
\pgfpathcurveto{\pgfqpoint{0.273cm}{1.432cm}}{\pgfqpoint{0.259cm}{1.467cm}}{\pgfqpoint{0.233cm}{1.492cm}}
\pgfpathcurveto{\pgfqpoint{0.207cm}{1.518cm}}{\pgfqpoint{0.173cm}{1.532cm}}{\pgfqpoint{0.137cm}{1.532cm}}
\pgfpathcurveto{\pgfqpoint{0.1cm}{1.532cm}}{\pgfqpoint{0.066cm}{1.518cm}}{\pgfqpoint{0.04cm}{1.492cm}}
\pgfpathcurveto{\pgfqpoint{0.014cm}{1.467cm}}{\pgfqpoint{0cm}{1.432cm}}{\pgfqpoint{0cm}{1.395cm}}
\pgfpathcurveto{\pgfqpoint{0cm}{1.359cm}}{\pgfqpoint{0.014cm}{1.324cm}}{\pgfqpoint{0.04cm}{1.299cm}}
\pgfpathcurveto{\pgfqpoint{0.066cm}{1.273cm}}{\pgfqpoint{0.1cm}{1.258cm}}{\pgfqpoint{0.137cm}{1.258cm}}
\pgfpathcurveto{\pgfqpoint{0.173cm}{1.258cm}}{\pgfqpoint{0.207cm}{1.273cm}}{\pgfqpoint{0.233cm}{1.299cm}}
\pgfpathcurveto{\pgfqpoint{0.259cm}{1.324cm}}{\pgfqpoint{0.273cm}{1.359cm}}{\pgfqpoint{0.273cm}{1.395cm}}
\pgfusepath{fill}
\begin{pgfscope}
\pgfsetdash{}{0cm}
\pgfsetlinewidth{0.818mm}
\pgfsetmiterlimit{7.0}
\pgfpathmoveto{\pgfqpoint{0.682cm}{0.671cm}}
\pgfpathlineto{\pgfqpoint{0.679cm}{1.418cm}}
\pgfusepath{stroke}
\end{pgfscope}
\pgfpathmoveto{\pgfqpoint{0.815cm}{1.399cm}}
\pgfpathcurveto{\pgfqpoint{0.815cm}{1.435cm}}{\pgfqpoint{0.801cm}{1.47cm}}{\pgfqpoint{0.775cm}{1.496cm}}
\pgfpathcurveto{\pgfqpoint{0.75cm}{1.521cm}}{\pgfqpoint{0.715cm}{1.536cm}}{\pgfqpoint{0.679cm}{1.536cm}}
\pgfpathcurveto{\pgfqpoint{0.643cm}{1.536cm}}{\pgfqpoint{0.608cm}{1.521cm}}{\pgfqpoint{0.582cm}{1.496cm}}
\pgfpathcurveto{\pgfqpoint{0.557cm}{1.47cm}}{\pgfqpoint{0.542cm}{1.435cm}}{\pgfqpoint{0.542cm}{1.399cm}}
\pgfpathcurveto{\pgfqpoint{0.542cm}{1.363cm}}{\pgfqpoint{0.557cm}{1.328cm}}{\pgfqpoint{0.582cm}{1.302cm}}
\pgfpathcurveto{\pgfqpoint{0.608cm}{1.276cm}}{\pgfqpoint{0.643cm}{1.262cm}}{\pgfqpoint{0.679cm}{1.262cm}}
\pgfpathcurveto{\pgfqpoint{0.715cm}{1.262cm}}{\pgfqpoint{0.75cm}{1.276cm}}{\pgfqpoint{0.775cm}{1.302cm}}
\pgfpathcurveto{\pgfqpoint{0.801cm}{1.328cm}}{\pgfqpoint{0.815cm}{1.363cm}}{\pgfqpoint{0.815cm}{1.399cm}}
\pgfusepath{fill}
\pgfpathmoveto{\pgfqpoint{1.345cm}{1.371cm}}
\pgfpathcurveto{\pgfqpoint{1.345cm}{1.408cm}}{\pgfqpoint{1.331cm}{1.442cm}}{\pgfqpoint{1.305cm}{1.468cm}}
\pgfpathcurveto{\pgfqpoint{1.28cm}{1.494cm}}{\pgfqpoint{1.245cm}{1.508cm}}{\pgfqpoint{1.209cm}{1.508cm}}
\pgfpathcurveto{\pgfqpoint{1.172cm}{1.508cm}}{\pgfqpoint{1.138cm}{1.494cm}}{\pgfqpoint{1.112cm}{1.468cm}}
\pgfpathcurveto{\pgfqpoint{1.087cm}{1.442cm}}{\pgfqpoint{1.072cm}{1.408cm}}{\pgfqpoint{1.072cm}{1.371cm}}
\pgfpathcurveto{\pgfqpoint{1.072cm}{1.335cm}}{\pgfqpoint{1.087cm}{1.3cm}}{\pgfqpoint{1.112cm}{1.274cm}}
\pgfpathcurveto{\pgfqpoint{1.138cm}{1.249cm}}{\pgfqpoint{1.172cm}{1.234cm}}{\pgfqpoint{1.209cm}{1.234cm}}
\pgfpathcurveto{\pgfqpoint{1.245cm}{1.234cm}}{\pgfqpoint{1.28cm}{1.249cm}}{\pgfqpoint{1.305cm}{1.274cm}}
\pgfpathcurveto{\pgfqpoint{1.331cm}{1.3cm}}{\pgfqpoint{1.345cm}{1.335cm}}{\pgfqpoint{1.345cm}{1.371cm}}
\pgfusepath{fill}
\begin{pgfscope}
\pgfsetdash{}{0cm}
\pgfsetlinewidth{0.818mm}
\pgfsetroundcap
\pgfsetmiterlimit{4.0}
\pgfpathmoveto{\pgfqpoint{0.682cm}{0.671cm}}
\pgfpathlineto{\pgfqpoint{0.682cm}{0.042cm}}
\pgfusepath{stroke}
\end{pgfscope}
\end{pgfscope}
\end{pgfscope}
\end{pgfscope}
\end{tikzpicture}}}
   \|_{C_T^{\beta / 2} L^{\infty, \varepsilon} (\rho^{\sigma})} \lesssim
   1, \]
uniformly in $M, \varepsilon$ thanks to the presence of the weight. For details and further references see e.g. Section 3 in \cite{GH18}. Here and in the sequel, $T\in (0,\infty)$ denotes an arbitrary finite time horizon and $C_{T}$ and $C^{\beta/2}_{T}$ are shortcut notations for $C([0,T])$ and $C^{\beta/2}([0,T])$, respectively. Throughout
our analysis, we fix $\kappa, \beta > 0$ in the above estimate such that
$\beta \geqslant 3 \kappa$. This condition will be needed for the control of
a parabolic commutator  in Lemma
\ref{lemma:bounds-rhs1} below. On the other hand, the parameter $\sigma > 0$
varies from line to line and can be arbitrarily small.

As already discussed  in Section \ref{s:strat}, in particular after equation \eqref{eq:Xt31}, the usual decomposition $\varphi_{M,\varepsilon}=X_{M,\varepsilon}-\lambda X^{\!\resizebox{0.6em}{!}{
\begin{tikzpicture}
\pgfpathmoveto{\pgfqpoint{0cm}{-0.035cm}}
\pgfpathlineto{\pgfqpoint{1.376cm}{-0.035cm}}
\pgfpathlineto{\pgfqpoint{1.376cm}{1.552cm}}
\pgfpathlineto{\pgfqpoint{0cm}{1.552cm}}
\pgfpathclose
\pgfusepath{clip}
\begin{pgfscope}
\begin{pgfscope}
\pgfpathmoveto{\pgfqpoint{0cm}{-0.035cm}}
\pgfpathlineto{\pgfqpoint{1.376cm}{-0.035cm}}
\pgfpathlineto{\pgfqpoint{1.376cm}{1.552cm}}
\pgfpathlineto{\pgfqpoint{0cm}{1.552cm}}
\pgfpathclose
\pgfusepath{clip}
\begin{pgfscope}
\begin{pgfscope}
\pgfsetdash{}{0cm}
\pgfsetlinewidth{0.818mm}
\pgfsetroundcap
\pgfsetroundjoin
\pgfsetmiterlimit{7.0}
\definecolor{eps2pgf_color}{gray}{0}\pgfsetstrokecolor{eps2pgf_color}\pgfsetfillcolor{eps2pgf_color}
\pgfpathmoveto{\pgfqpoint{0.117cm}{1.421cm}}
\pgfpathlineto{\pgfqpoint{0.682cm}{0.671cm}}
\pgfpathlineto{\pgfqpoint{1.246cm}{1.421cm}}
\pgfusepath{stroke}
\end{pgfscope}
\definecolor{eps2pgf_color}{gray}{0}\pgfsetstrokecolor{eps2pgf_color}\pgfsetfillcolor{eps2pgf_color}
\pgfpathmoveto{\pgfqpoint{0.273cm}{1.395cm}}
\pgfpathcurveto{\pgfqpoint{0.273cm}{1.432cm}}{\pgfqpoint{0.259cm}{1.467cm}}{\pgfqpoint{0.233cm}{1.492cm}}
\pgfpathcurveto{\pgfqpoint{0.207cm}{1.518cm}}{\pgfqpoint{0.173cm}{1.532cm}}{\pgfqpoint{0.137cm}{1.532cm}}
\pgfpathcurveto{\pgfqpoint{0.1cm}{1.532cm}}{\pgfqpoint{0.066cm}{1.518cm}}{\pgfqpoint{0.04cm}{1.492cm}}
\pgfpathcurveto{\pgfqpoint{0.014cm}{1.467cm}}{\pgfqpoint{0cm}{1.432cm}}{\pgfqpoint{0cm}{1.395cm}}
\pgfpathcurveto{\pgfqpoint{0cm}{1.359cm}}{\pgfqpoint{0.014cm}{1.324cm}}{\pgfqpoint{0.04cm}{1.299cm}}
\pgfpathcurveto{\pgfqpoint{0.066cm}{1.273cm}}{\pgfqpoint{0.1cm}{1.258cm}}{\pgfqpoint{0.137cm}{1.258cm}}
\pgfpathcurveto{\pgfqpoint{0.173cm}{1.258cm}}{\pgfqpoint{0.207cm}{1.273cm}}{\pgfqpoint{0.233cm}{1.299cm}}
\pgfpathcurveto{\pgfqpoint{0.259cm}{1.324cm}}{\pgfqpoint{0.273cm}{1.359cm}}{\pgfqpoint{0.273cm}{1.395cm}}
\pgfusepath{fill}
\begin{pgfscope}
\pgfsetdash{}{0cm}
\pgfsetlinewidth{0.818mm}
\pgfsetmiterlimit{7.0}
\pgfpathmoveto{\pgfqpoint{0.682cm}{0.671cm}}
\pgfpathlineto{\pgfqpoint{0.679cm}{1.418cm}}
\pgfusepath{stroke}
\end{pgfscope}
\pgfpathmoveto{\pgfqpoint{0.815cm}{1.399cm}}
\pgfpathcurveto{\pgfqpoint{0.815cm}{1.435cm}}{\pgfqpoint{0.801cm}{1.47cm}}{\pgfqpoint{0.775cm}{1.496cm}}
\pgfpathcurveto{\pgfqpoint{0.75cm}{1.521cm}}{\pgfqpoint{0.715cm}{1.536cm}}{\pgfqpoint{0.679cm}{1.536cm}}
\pgfpathcurveto{\pgfqpoint{0.643cm}{1.536cm}}{\pgfqpoint{0.608cm}{1.521cm}}{\pgfqpoint{0.582cm}{1.496cm}}
\pgfpathcurveto{\pgfqpoint{0.557cm}{1.47cm}}{\pgfqpoint{0.542cm}{1.435cm}}{\pgfqpoint{0.542cm}{1.399cm}}
\pgfpathcurveto{\pgfqpoint{0.542cm}{1.363cm}}{\pgfqpoint{0.557cm}{1.328cm}}{\pgfqpoint{0.582cm}{1.302cm}}
\pgfpathcurveto{\pgfqpoint{0.608cm}{1.276cm}}{\pgfqpoint{0.643cm}{1.262cm}}{\pgfqpoint{0.679cm}{1.262cm}}
\pgfpathcurveto{\pgfqpoint{0.715cm}{1.262cm}}{\pgfqpoint{0.75cm}{1.276cm}}{\pgfqpoint{0.775cm}{1.302cm}}
\pgfpathcurveto{\pgfqpoint{0.801cm}{1.328cm}}{\pgfqpoint{0.815cm}{1.363cm}}{\pgfqpoint{0.815cm}{1.399cm}}
\pgfusepath{fill}
\pgfpathmoveto{\pgfqpoint{1.345cm}{1.371cm}}
\pgfpathcurveto{\pgfqpoint{1.345cm}{1.408cm}}{\pgfqpoint{1.331cm}{1.442cm}}{\pgfqpoint{1.305cm}{1.468cm}}
\pgfpathcurveto{\pgfqpoint{1.28cm}{1.494cm}}{\pgfqpoint{1.245cm}{1.508cm}}{\pgfqpoint{1.209cm}{1.508cm}}
\pgfpathcurveto{\pgfqpoint{1.172cm}{1.508cm}}{\pgfqpoint{1.138cm}{1.494cm}}{\pgfqpoint{1.112cm}{1.468cm}}
\pgfpathcurveto{\pgfqpoint{1.087cm}{1.442cm}}{\pgfqpoint{1.072cm}{1.408cm}}{\pgfqpoint{1.072cm}{1.371cm}}
\pgfpathcurveto{\pgfqpoint{1.072cm}{1.335cm}}{\pgfqpoint{1.087cm}{1.3cm}}{\pgfqpoint{1.112cm}{1.274cm}}
\pgfpathcurveto{\pgfqpoint{1.138cm}{1.249cm}}{\pgfqpoint{1.172cm}{1.234cm}}{\pgfqpoint{1.209cm}{1.234cm}}
\pgfpathcurveto{\pgfqpoint{1.245cm}{1.234cm}}{\pgfqpoint{1.28cm}{1.249cm}}{\pgfqpoint{1.305cm}{1.274cm}}
\pgfpathcurveto{\pgfqpoint{1.331cm}{1.3cm}}{\pgfqpoint{1.345cm}{1.335cm}}{\pgfqpoint{1.345cm}{1.371cm}}
\pgfusepath{fill}
\begin{pgfscope}
\pgfsetdash{}{0cm}
\pgfsetlinewidth{0.818mm}
\pgfsetroundcap
\pgfsetmiterlimit{4.0}
\pgfpathmoveto{\pgfqpoint{0.682cm}{0.671cm}}
\pgfpathlineto{\pgfqpoint{0.682cm}{0.042cm}}
\pgfusepath{stroke}
\end{pgfscope}
\end{pgfscope}
\end{pgfscope}
\end{pgfscope}
\end{tikzpicture}}}_{M,\varepsilon}+\zeta_{{M,\varepsilon}}$ is not suitable for the energy method. Indeed, it would introduce  the term $\llbracket X_{M, \varepsilon}^2 \rrbracket \succ
X^{\!\resizebox{0.6em}{!}{
\begin{tikzpicture}
\pgfpathmoveto{\pgfqpoint{0cm}{-0.035cm}}
\pgfpathlineto{\pgfqpoint{1.376cm}{-0.035cm}}
\pgfpathlineto{\pgfqpoint{1.376cm}{1.552cm}}
\pgfpathlineto{\pgfqpoint{0cm}{1.552cm}}
\pgfpathclose
\pgfusepath{clip}
\begin{pgfscope}
\begin{pgfscope}
\pgfpathmoveto{\pgfqpoint{0cm}{-0.035cm}}
\pgfpathlineto{\pgfqpoint{1.376cm}{-0.035cm}}
\pgfpathlineto{\pgfqpoint{1.376cm}{1.552cm}}
\pgfpathlineto{\pgfqpoint{0cm}{1.552cm}}
\pgfpathclose
\pgfusepath{clip}
\begin{pgfscope}
\begin{pgfscope}
\pgfsetdash{}{0cm}
\pgfsetlinewidth{0.818mm}
\pgfsetroundcap
\pgfsetroundjoin
\pgfsetmiterlimit{7.0}
\definecolor{eps2pgf_color}{gray}{0}\pgfsetstrokecolor{eps2pgf_color}\pgfsetfillcolor{eps2pgf_color}
\pgfpathmoveto{\pgfqpoint{0.117cm}{1.421cm}}
\pgfpathlineto{\pgfqpoint{0.682cm}{0.671cm}}
\pgfpathlineto{\pgfqpoint{1.246cm}{1.421cm}}
\pgfusepath{stroke}
\end{pgfscope}
\definecolor{eps2pgf_color}{gray}{0}\pgfsetstrokecolor{eps2pgf_color}\pgfsetfillcolor{eps2pgf_color}
\pgfpathmoveto{\pgfqpoint{0.273cm}{1.395cm}}
\pgfpathcurveto{\pgfqpoint{0.273cm}{1.432cm}}{\pgfqpoint{0.259cm}{1.467cm}}{\pgfqpoint{0.233cm}{1.492cm}}
\pgfpathcurveto{\pgfqpoint{0.207cm}{1.518cm}}{\pgfqpoint{0.173cm}{1.532cm}}{\pgfqpoint{0.137cm}{1.532cm}}
\pgfpathcurveto{\pgfqpoint{0.1cm}{1.532cm}}{\pgfqpoint{0.066cm}{1.518cm}}{\pgfqpoint{0.04cm}{1.492cm}}
\pgfpathcurveto{\pgfqpoint{0.014cm}{1.467cm}}{\pgfqpoint{0cm}{1.432cm}}{\pgfqpoint{0cm}{1.395cm}}
\pgfpathcurveto{\pgfqpoint{0cm}{1.359cm}}{\pgfqpoint{0.014cm}{1.324cm}}{\pgfqpoint{0.04cm}{1.299cm}}
\pgfpathcurveto{\pgfqpoint{0.066cm}{1.273cm}}{\pgfqpoint{0.1cm}{1.258cm}}{\pgfqpoint{0.137cm}{1.258cm}}
\pgfpathcurveto{\pgfqpoint{0.173cm}{1.258cm}}{\pgfqpoint{0.207cm}{1.273cm}}{\pgfqpoint{0.233cm}{1.299cm}}
\pgfpathcurveto{\pgfqpoint{0.259cm}{1.324cm}}{\pgfqpoint{0.273cm}{1.359cm}}{\pgfqpoint{0.273cm}{1.395cm}}
\pgfusepath{fill}
\begin{pgfscope}
\pgfsetdash{}{0cm}
\pgfsetlinewidth{0.818mm}
\pgfsetmiterlimit{7.0}
\pgfpathmoveto{\pgfqpoint{0.682cm}{0.671cm}}
\pgfpathlineto{\pgfqpoint{0.679cm}{1.418cm}}
\pgfusepath{stroke}
\end{pgfscope}
\pgfpathmoveto{\pgfqpoint{0.815cm}{1.399cm}}
\pgfpathcurveto{\pgfqpoint{0.815cm}{1.435cm}}{\pgfqpoint{0.801cm}{1.47cm}}{\pgfqpoint{0.775cm}{1.496cm}}
\pgfpathcurveto{\pgfqpoint{0.75cm}{1.521cm}}{\pgfqpoint{0.715cm}{1.536cm}}{\pgfqpoint{0.679cm}{1.536cm}}
\pgfpathcurveto{\pgfqpoint{0.643cm}{1.536cm}}{\pgfqpoint{0.608cm}{1.521cm}}{\pgfqpoint{0.582cm}{1.496cm}}
\pgfpathcurveto{\pgfqpoint{0.557cm}{1.47cm}}{\pgfqpoint{0.542cm}{1.435cm}}{\pgfqpoint{0.542cm}{1.399cm}}
\pgfpathcurveto{\pgfqpoint{0.542cm}{1.363cm}}{\pgfqpoint{0.557cm}{1.328cm}}{\pgfqpoint{0.582cm}{1.302cm}}
\pgfpathcurveto{\pgfqpoint{0.608cm}{1.276cm}}{\pgfqpoint{0.643cm}{1.262cm}}{\pgfqpoint{0.679cm}{1.262cm}}
\pgfpathcurveto{\pgfqpoint{0.715cm}{1.262cm}}{\pgfqpoint{0.75cm}{1.276cm}}{\pgfqpoint{0.775cm}{1.302cm}}
\pgfpathcurveto{\pgfqpoint{0.801cm}{1.328cm}}{\pgfqpoint{0.815cm}{1.363cm}}{\pgfqpoint{0.815cm}{1.399cm}}
\pgfusepath{fill}
\pgfpathmoveto{\pgfqpoint{1.345cm}{1.371cm}}
\pgfpathcurveto{\pgfqpoint{1.345cm}{1.408cm}}{\pgfqpoint{1.331cm}{1.442cm}}{\pgfqpoint{1.305cm}{1.468cm}}
\pgfpathcurveto{\pgfqpoint{1.28cm}{1.494cm}}{\pgfqpoint{1.245cm}{1.508cm}}{\pgfqpoint{1.209cm}{1.508cm}}
\pgfpathcurveto{\pgfqpoint{1.172cm}{1.508cm}}{\pgfqpoint{1.138cm}{1.494cm}}{\pgfqpoint{1.112cm}{1.468cm}}
\pgfpathcurveto{\pgfqpoint{1.087cm}{1.442cm}}{\pgfqpoint{1.072cm}{1.408cm}}{\pgfqpoint{1.072cm}{1.371cm}}
\pgfpathcurveto{\pgfqpoint{1.072cm}{1.335cm}}{\pgfqpoint{1.087cm}{1.3cm}}{\pgfqpoint{1.112cm}{1.274cm}}
\pgfpathcurveto{\pgfqpoint{1.138cm}{1.249cm}}{\pgfqpoint{1.172cm}{1.234cm}}{\pgfqpoint{1.209cm}{1.234cm}}
\pgfpathcurveto{\pgfqpoint{1.245cm}{1.234cm}}{\pgfqpoint{1.28cm}{1.249cm}}{\pgfqpoint{1.305cm}{1.274cm}}
\pgfpathcurveto{\pgfqpoint{1.331cm}{1.3cm}}{\pgfqpoint{1.345cm}{1.335cm}}{\pgfqpoint{1.345cm}{1.371cm}}
\pgfusepath{fill}
\begin{pgfscope}
\pgfsetdash{}{0cm}
\pgfsetlinewidth{0.818mm}
\pgfsetroundcap
\pgfsetmiterlimit{4.0}
\pgfpathmoveto{\pgfqpoint{0.682cm}{0.671cm}}
\pgfpathlineto{\pgfqpoint{0.682cm}{0.042cm}}
\pgfusepath{stroke}
\end{pgfscope}
\end{pgfscope}
\end{pgfscope}
\end{pgfscope}
\end{tikzpicture}}}_{M, \varepsilon}$ which cannot be cancelled or controlled by the available quantities. We overcome this issue by working rather with the decomposition $\varphi_{M,\varepsilon}=X_{M,\varepsilon}+Y_{M,\varepsilon}+\phi_{M,\varepsilon}$ defined in the sequel. Note that a similar modification of the paracontrolled ansatz has been necessary to construct a renormalized control problem for the KPZ equation in \cite{gubinelli_kpz_2017}. Here, the price to pay is that the auxiliary variables $Y_{M,\varepsilon}$, $\phi_{M,\varepsilon}$ are not stationary. Thus, in Section~\ref{ss:tight} we go back to the stationary decomposition $\varphi_{M,\varepsilon}=X_{M,\varepsilon}-\lambda X^{\!\resizebox{0.6em}{!}{
\begin{tikzpicture}
\pgfpathmoveto{\pgfqpoint{0cm}{-0.035cm}}
\pgfpathlineto{\pgfqpoint{1.376cm}{-0.035cm}}
\pgfpathlineto{\pgfqpoint{1.376cm}{1.552cm}}
\pgfpathlineto{\pgfqpoint{0cm}{1.552cm}}
\pgfpathclose
\pgfusepath{clip}
\begin{pgfscope}
\begin{pgfscope}
\pgfpathmoveto{\pgfqpoint{0cm}{-0.035cm}}
\pgfpathlineto{\pgfqpoint{1.376cm}{-0.035cm}}
\pgfpathlineto{\pgfqpoint{1.376cm}{1.552cm}}
\pgfpathlineto{\pgfqpoint{0cm}{1.552cm}}
\pgfpathclose
\pgfusepath{clip}
\begin{pgfscope}
\begin{pgfscope}
\pgfsetdash{}{0cm}
\pgfsetlinewidth{0.818mm}
\pgfsetroundcap
\pgfsetroundjoin
\pgfsetmiterlimit{7.0}
\definecolor{eps2pgf_color}{gray}{0}\pgfsetstrokecolor{eps2pgf_color}\pgfsetfillcolor{eps2pgf_color}
\pgfpathmoveto{\pgfqpoint{0.117cm}{1.421cm}}
\pgfpathlineto{\pgfqpoint{0.682cm}{0.671cm}}
\pgfpathlineto{\pgfqpoint{1.246cm}{1.421cm}}
\pgfusepath{stroke}
\end{pgfscope}
\definecolor{eps2pgf_color}{gray}{0}\pgfsetstrokecolor{eps2pgf_color}\pgfsetfillcolor{eps2pgf_color}
\pgfpathmoveto{\pgfqpoint{0.273cm}{1.395cm}}
\pgfpathcurveto{\pgfqpoint{0.273cm}{1.432cm}}{\pgfqpoint{0.259cm}{1.467cm}}{\pgfqpoint{0.233cm}{1.492cm}}
\pgfpathcurveto{\pgfqpoint{0.207cm}{1.518cm}}{\pgfqpoint{0.173cm}{1.532cm}}{\pgfqpoint{0.137cm}{1.532cm}}
\pgfpathcurveto{\pgfqpoint{0.1cm}{1.532cm}}{\pgfqpoint{0.066cm}{1.518cm}}{\pgfqpoint{0.04cm}{1.492cm}}
\pgfpathcurveto{\pgfqpoint{0.014cm}{1.467cm}}{\pgfqpoint{0cm}{1.432cm}}{\pgfqpoint{0cm}{1.395cm}}
\pgfpathcurveto{\pgfqpoint{0cm}{1.359cm}}{\pgfqpoint{0.014cm}{1.324cm}}{\pgfqpoint{0.04cm}{1.299cm}}
\pgfpathcurveto{\pgfqpoint{0.066cm}{1.273cm}}{\pgfqpoint{0.1cm}{1.258cm}}{\pgfqpoint{0.137cm}{1.258cm}}
\pgfpathcurveto{\pgfqpoint{0.173cm}{1.258cm}}{\pgfqpoint{0.207cm}{1.273cm}}{\pgfqpoint{0.233cm}{1.299cm}}
\pgfpathcurveto{\pgfqpoint{0.259cm}{1.324cm}}{\pgfqpoint{0.273cm}{1.359cm}}{\pgfqpoint{0.273cm}{1.395cm}}
\pgfusepath{fill}
\begin{pgfscope}
\pgfsetdash{}{0cm}
\pgfsetlinewidth{0.818mm}
\pgfsetmiterlimit{7.0}
\pgfpathmoveto{\pgfqpoint{0.682cm}{0.671cm}}
\pgfpathlineto{\pgfqpoint{0.679cm}{1.418cm}}
\pgfusepath{stroke}
\end{pgfscope}
\pgfpathmoveto{\pgfqpoint{0.815cm}{1.399cm}}
\pgfpathcurveto{\pgfqpoint{0.815cm}{1.435cm}}{\pgfqpoint{0.801cm}{1.47cm}}{\pgfqpoint{0.775cm}{1.496cm}}
\pgfpathcurveto{\pgfqpoint{0.75cm}{1.521cm}}{\pgfqpoint{0.715cm}{1.536cm}}{\pgfqpoint{0.679cm}{1.536cm}}
\pgfpathcurveto{\pgfqpoint{0.643cm}{1.536cm}}{\pgfqpoint{0.608cm}{1.521cm}}{\pgfqpoint{0.582cm}{1.496cm}}
\pgfpathcurveto{\pgfqpoint{0.557cm}{1.47cm}}{\pgfqpoint{0.542cm}{1.435cm}}{\pgfqpoint{0.542cm}{1.399cm}}
\pgfpathcurveto{\pgfqpoint{0.542cm}{1.363cm}}{\pgfqpoint{0.557cm}{1.328cm}}{\pgfqpoint{0.582cm}{1.302cm}}
\pgfpathcurveto{\pgfqpoint{0.608cm}{1.276cm}}{\pgfqpoint{0.643cm}{1.262cm}}{\pgfqpoint{0.679cm}{1.262cm}}
\pgfpathcurveto{\pgfqpoint{0.715cm}{1.262cm}}{\pgfqpoint{0.75cm}{1.276cm}}{\pgfqpoint{0.775cm}{1.302cm}}
\pgfpathcurveto{\pgfqpoint{0.801cm}{1.328cm}}{\pgfqpoint{0.815cm}{1.363cm}}{\pgfqpoint{0.815cm}{1.399cm}}
\pgfusepath{fill}
\pgfpathmoveto{\pgfqpoint{1.345cm}{1.371cm}}
\pgfpathcurveto{\pgfqpoint{1.345cm}{1.408cm}}{\pgfqpoint{1.331cm}{1.442cm}}{\pgfqpoint{1.305cm}{1.468cm}}
\pgfpathcurveto{\pgfqpoint{1.28cm}{1.494cm}}{\pgfqpoint{1.245cm}{1.508cm}}{\pgfqpoint{1.209cm}{1.508cm}}
\pgfpathcurveto{\pgfqpoint{1.172cm}{1.508cm}}{\pgfqpoint{1.138cm}{1.494cm}}{\pgfqpoint{1.112cm}{1.468cm}}
\pgfpathcurveto{\pgfqpoint{1.087cm}{1.442cm}}{\pgfqpoint{1.072cm}{1.408cm}}{\pgfqpoint{1.072cm}{1.371cm}}
\pgfpathcurveto{\pgfqpoint{1.072cm}{1.335cm}}{\pgfqpoint{1.087cm}{1.3cm}}{\pgfqpoint{1.112cm}{1.274cm}}
\pgfpathcurveto{\pgfqpoint{1.138cm}{1.249cm}}{\pgfqpoint{1.172cm}{1.234cm}}{\pgfqpoint{1.209cm}{1.234cm}}
\pgfpathcurveto{\pgfqpoint{1.245cm}{1.234cm}}{\pgfqpoint{1.28cm}{1.249cm}}{\pgfqpoint{1.305cm}{1.274cm}}
\pgfpathcurveto{\pgfqpoint{1.331cm}{1.3cm}}{\pgfqpoint{1.345cm}{1.335cm}}{\pgfqpoint{1.345cm}{1.371cm}}
\pgfusepath{fill}
\begin{pgfscope}
\pgfsetdash{}{0cm}
\pgfsetlinewidth{0.818mm}
\pgfsetroundcap
\pgfsetmiterlimit{4.0}
\pgfpathmoveto{\pgfqpoint{0.682cm}{0.671cm}}
\pgfpathlineto{\pgfqpoint{0.682cm}{0.042cm}}
\pgfusepath{stroke}
\end{pgfscope}
\end{pgfscope}
\end{pgfscope}
\end{pgfscope}
\end{tikzpicture}}}_{M,\varepsilon}+\zeta_{{M,\varepsilon}}$. 

If $\UU^{\varepsilon}_{>}$ is a localizer defined for some given constant $L >
0$ according to Lemma~\ref{lem:loc}, we let $Y_{M, \varepsilon}$ be the
solution of \eqref{eq:Y1} 
hence
\begin{equation}\label{eq:YY}
Y_{M, \varepsilon} = -\lambda
X_{M, \varepsilon}^{\!\resizebox{0.6em}{!}{
\begin{tikzpicture}
\pgfpathmoveto{\pgfqpoint{0cm}{-0.035cm}}
\pgfpathlineto{\pgfqpoint{1.376cm}{-0.035cm}}
\pgfpathlineto{\pgfqpoint{1.376cm}{1.552cm}}
\pgfpathlineto{\pgfqpoint{0cm}{1.552cm}}
\pgfpathclose
\pgfusepath{clip}
\begin{pgfscope}
\begin{pgfscope}
\pgfpathmoveto{\pgfqpoint{0cm}{-0.035cm}}
\pgfpathlineto{\pgfqpoint{1.376cm}{-0.035cm}}
\pgfpathlineto{\pgfqpoint{1.376cm}{1.552cm}}
\pgfpathlineto{\pgfqpoint{0cm}{1.552cm}}
\pgfpathclose
\pgfusepath{clip}
\begin{pgfscope}
\begin{pgfscope}
\pgfsetdash{}{0cm}
\pgfsetlinewidth{0.818mm}
\pgfsetroundcap
\pgfsetroundjoin
\pgfsetmiterlimit{7.0}
\definecolor{eps2pgf_color}{gray}{0}\pgfsetstrokecolor{eps2pgf_color}\pgfsetfillcolor{eps2pgf_color}
\pgfpathmoveto{\pgfqpoint{0.117cm}{1.421cm}}
\pgfpathlineto{\pgfqpoint{0.682cm}{0.671cm}}
\pgfpathlineto{\pgfqpoint{1.246cm}{1.421cm}}
\pgfusepath{stroke}
\end{pgfscope}
\definecolor{eps2pgf_color}{gray}{0}\pgfsetstrokecolor{eps2pgf_color}\pgfsetfillcolor{eps2pgf_color}
\pgfpathmoveto{\pgfqpoint{0.273cm}{1.395cm}}
\pgfpathcurveto{\pgfqpoint{0.273cm}{1.432cm}}{\pgfqpoint{0.259cm}{1.467cm}}{\pgfqpoint{0.233cm}{1.492cm}}
\pgfpathcurveto{\pgfqpoint{0.207cm}{1.518cm}}{\pgfqpoint{0.173cm}{1.532cm}}{\pgfqpoint{0.137cm}{1.532cm}}
\pgfpathcurveto{\pgfqpoint{0.1cm}{1.532cm}}{\pgfqpoint{0.066cm}{1.518cm}}{\pgfqpoint{0.04cm}{1.492cm}}
\pgfpathcurveto{\pgfqpoint{0.014cm}{1.467cm}}{\pgfqpoint{0cm}{1.432cm}}{\pgfqpoint{0cm}{1.395cm}}
\pgfpathcurveto{\pgfqpoint{0cm}{1.359cm}}{\pgfqpoint{0.014cm}{1.324cm}}{\pgfqpoint{0.04cm}{1.299cm}}
\pgfpathcurveto{\pgfqpoint{0.066cm}{1.273cm}}{\pgfqpoint{0.1cm}{1.258cm}}{\pgfqpoint{0.137cm}{1.258cm}}
\pgfpathcurveto{\pgfqpoint{0.173cm}{1.258cm}}{\pgfqpoint{0.207cm}{1.273cm}}{\pgfqpoint{0.233cm}{1.299cm}}
\pgfpathcurveto{\pgfqpoint{0.259cm}{1.324cm}}{\pgfqpoint{0.273cm}{1.359cm}}{\pgfqpoint{0.273cm}{1.395cm}}
\pgfusepath{fill}
\begin{pgfscope}
\pgfsetdash{}{0cm}
\pgfsetlinewidth{0.818mm}
\pgfsetmiterlimit{7.0}
\pgfpathmoveto{\pgfqpoint{0.682cm}{0.671cm}}
\pgfpathlineto{\pgfqpoint{0.679cm}{1.418cm}}
\pgfusepath{stroke}
\end{pgfscope}
\pgfpathmoveto{\pgfqpoint{0.815cm}{1.399cm}}
\pgfpathcurveto{\pgfqpoint{0.815cm}{1.435cm}}{\pgfqpoint{0.801cm}{1.47cm}}{\pgfqpoint{0.775cm}{1.496cm}}
\pgfpathcurveto{\pgfqpoint{0.75cm}{1.521cm}}{\pgfqpoint{0.715cm}{1.536cm}}{\pgfqpoint{0.679cm}{1.536cm}}
\pgfpathcurveto{\pgfqpoint{0.643cm}{1.536cm}}{\pgfqpoint{0.608cm}{1.521cm}}{\pgfqpoint{0.582cm}{1.496cm}}
\pgfpathcurveto{\pgfqpoint{0.557cm}{1.47cm}}{\pgfqpoint{0.542cm}{1.435cm}}{\pgfqpoint{0.542cm}{1.399cm}}
\pgfpathcurveto{\pgfqpoint{0.542cm}{1.363cm}}{\pgfqpoint{0.557cm}{1.328cm}}{\pgfqpoint{0.582cm}{1.302cm}}
\pgfpathcurveto{\pgfqpoint{0.608cm}{1.276cm}}{\pgfqpoint{0.643cm}{1.262cm}}{\pgfqpoint{0.679cm}{1.262cm}}
\pgfpathcurveto{\pgfqpoint{0.715cm}{1.262cm}}{\pgfqpoint{0.75cm}{1.276cm}}{\pgfqpoint{0.775cm}{1.302cm}}
\pgfpathcurveto{\pgfqpoint{0.801cm}{1.328cm}}{\pgfqpoint{0.815cm}{1.363cm}}{\pgfqpoint{0.815cm}{1.399cm}}
\pgfusepath{fill}
\pgfpathmoveto{\pgfqpoint{1.345cm}{1.371cm}}
\pgfpathcurveto{\pgfqpoint{1.345cm}{1.408cm}}{\pgfqpoint{1.331cm}{1.442cm}}{\pgfqpoint{1.305cm}{1.468cm}}
\pgfpathcurveto{\pgfqpoint{1.28cm}{1.494cm}}{\pgfqpoint{1.245cm}{1.508cm}}{\pgfqpoint{1.209cm}{1.508cm}}
\pgfpathcurveto{\pgfqpoint{1.172cm}{1.508cm}}{\pgfqpoint{1.138cm}{1.494cm}}{\pgfqpoint{1.112cm}{1.468cm}}
\pgfpathcurveto{\pgfqpoint{1.087cm}{1.442cm}}{\pgfqpoint{1.072cm}{1.408cm}}{\pgfqpoint{1.072cm}{1.371cm}}
\pgfpathcurveto{\pgfqpoint{1.072cm}{1.335cm}}{\pgfqpoint{1.087cm}{1.3cm}}{\pgfqpoint{1.112cm}{1.274cm}}
\pgfpathcurveto{\pgfqpoint{1.138cm}{1.249cm}}{\pgfqpoint{1.172cm}{1.234cm}}{\pgfqpoint{1.209cm}{1.234cm}}
\pgfpathcurveto{\pgfqpoint{1.245cm}{1.234cm}}{\pgfqpoint{1.28cm}{1.249cm}}{\pgfqpoint{1.305cm}{1.274cm}}
\pgfpathcurveto{\pgfqpoint{1.331cm}{1.3cm}}{\pgfqpoint{1.345cm}{1.335cm}}{\pgfqpoint{1.345cm}{1.371cm}}
\pgfusepath{fill}
\begin{pgfscope}
\pgfsetdash{}{0cm}
\pgfsetlinewidth{0.818mm}
\pgfsetroundcap
\pgfsetmiterlimit{4.0}
\pgfpathmoveto{\pgfqpoint{0.682cm}{0.671cm}}
\pgfpathlineto{\pgfqpoint{0.682cm}{0.042cm}}
\pgfusepath{stroke}
\end{pgfscope}
\end{pgfscope}
\end{pgfscope}
\end{pgfscope}
\end{tikzpicture}}} - \LL_{\varepsilon}^{- 1} [ 3\lambda (
\UU^{\varepsilon}_{>} \llbracket X_{M, \varepsilon}^2 \rrbracket ) \succ Y_{M,
\varepsilon} ].
\end{equation}
Note that this is an equation for $Y_{M, \varepsilon}$,
which also implies that $Y_{M, \varepsilon}$ is not a polynomial of the
Gaussian noise. However, as shown in the following lemma, $Y_{M, \varepsilon}$
can be constructed as a fixed point provided $L$ is large enough.

\begin{lemma}
  \label{lem:Y1}
  There exists $L_{0}=L_{0}(\lambda)\geqslant 0$ and $L=L(\lambda,M,\varepsilon) \geqslant 0$ with a (not relabeled) subsequence satisfying $L(\lambda,M,\varepsilon)\to L_{0}$ as $\varepsilon\to0$, $M\to\infty$, such that {\eqref{eq:Y1}} with $\UU^{\varepsilon}_{>}$ determined by  $L$ has a unique
  solution $Y_{M, \varepsilon}$ that belongs to $C_T \CC^{1 / 2 - \kappa}
  (\rho^{\sigma}) \cap C_T^{\beta / 2} L^{\infty} (\rho^{\sigma})$.
  Furthermore,
  \[ \| Y_{M, \varepsilon} \|_{C_T \CC^{1 / 2 - \kappa, \varepsilon}
     (\rho^{\sigma})} \lesssim \lambda \| X_{M, \varepsilon}^{\!\resizebox{0.6em}{!}{
\begin{tikzpicture}
\pgfpathmoveto{\pgfqpoint{0cm}{-0.035cm}}
\pgfpathlineto{\pgfqpoint{1.376cm}{-0.035cm}}
\pgfpathlineto{\pgfqpoint{1.376cm}{1.552cm}}
\pgfpathlineto{\pgfqpoint{0cm}{1.552cm}}
\pgfpathclose
\pgfusepath{clip}
\begin{pgfscope}
\begin{pgfscope}
\pgfpathmoveto{\pgfqpoint{0cm}{-0.035cm}}
\pgfpathlineto{\pgfqpoint{1.376cm}{-0.035cm}}
\pgfpathlineto{\pgfqpoint{1.376cm}{1.552cm}}
\pgfpathlineto{\pgfqpoint{0cm}{1.552cm}}
\pgfpathclose
\pgfusepath{clip}
\begin{pgfscope}
\begin{pgfscope}
\pgfsetdash{}{0cm}
\pgfsetlinewidth{0.818mm}
\pgfsetroundcap
\pgfsetroundjoin
\pgfsetmiterlimit{7.0}
\definecolor{eps2pgf_color}{gray}{0}\pgfsetstrokecolor{eps2pgf_color}\pgfsetfillcolor{eps2pgf_color}
\pgfpathmoveto{\pgfqpoint{0.117cm}{1.421cm}}
\pgfpathlineto{\pgfqpoint{0.682cm}{0.671cm}}
\pgfpathlineto{\pgfqpoint{1.246cm}{1.421cm}}
\pgfusepath{stroke}
\end{pgfscope}
\definecolor{eps2pgf_color}{gray}{0}\pgfsetstrokecolor{eps2pgf_color}\pgfsetfillcolor{eps2pgf_color}
\pgfpathmoveto{\pgfqpoint{0.273cm}{1.395cm}}
\pgfpathcurveto{\pgfqpoint{0.273cm}{1.432cm}}{\pgfqpoint{0.259cm}{1.467cm}}{\pgfqpoint{0.233cm}{1.492cm}}
\pgfpathcurveto{\pgfqpoint{0.207cm}{1.518cm}}{\pgfqpoint{0.173cm}{1.532cm}}{\pgfqpoint{0.137cm}{1.532cm}}
\pgfpathcurveto{\pgfqpoint{0.1cm}{1.532cm}}{\pgfqpoint{0.066cm}{1.518cm}}{\pgfqpoint{0.04cm}{1.492cm}}
\pgfpathcurveto{\pgfqpoint{0.014cm}{1.467cm}}{\pgfqpoint{0cm}{1.432cm}}{\pgfqpoint{0cm}{1.395cm}}
\pgfpathcurveto{\pgfqpoint{0cm}{1.359cm}}{\pgfqpoint{0.014cm}{1.324cm}}{\pgfqpoint{0.04cm}{1.299cm}}
\pgfpathcurveto{\pgfqpoint{0.066cm}{1.273cm}}{\pgfqpoint{0.1cm}{1.258cm}}{\pgfqpoint{0.137cm}{1.258cm}}
\pgfpathcurveto{\pgfqpoint{0.173cm}{1.258cm}}{\pgfqpoint{0.207cm}{1.273cm}}{\pgfqpoint{0.233cm}{1.299cm}}
\pgfpathcurveto{\pgfqpoint{0.259cm}{1.324cm}}{\pgfqpoint{0.273cm}{1.359cm}}{\pgfqpoint{0.273cm}{1.395cm}}
\pgfusepath{fill}
\begin{pgfscope}
\pgfsetdash{}{0cm}
\pgfsetlinewidth{0.818mm}
\pgfsetmiterlimit{7.0}
\pgfpathmoveto{\pgfqpoint{0.682cm}{0.671cm}}
\pgfpathlineto{\pgfqpoint{0.679cm}{1.418cm}}
\pgfusepath{stroke}
\end{pgfscope}
\pgfpathmoveto{\pgfqpoint{0.815cm}{1.399cm}}
\pgfpathcurveto{\pgfqpoint{0.815cm}{1.435cm}}{\pgfqpoint{0.801cm}{1.47cm}}{\pgfqpoint{0.775cm}{1.496cm}}
\pgfpathcurveto{\pgfqpoint{0.75cm}{1.521cm}}{\pgfqpoint{0.715cm}{1.536cm}}{\pgfqpoint{0.679cm}{1.536cm}}
\pgfpathcurveto{\pgfqpoint{0.643cm}{1.536cm}}{\pgfqpoint{0.608cm}{1.521cm}}{\pgfqpoint{0.582cm}{1.496cm}}
\pgfpathcurveto{\pgfqpoint{0.557cm}{1.47cm}}{\pgfqpoint{0.542cm}{1.435cm}}{\pgfqpoint{0.542cm}{1.399cm}}
\pgfpathcurveto{\pgfqpoint{0.542cm}{1.363cm}}{\pgfqpoint{0.557cm}{1.328cm}}{\pgfqpoint{0.582cm}{1.302cm}}
\pgfpathcurveto{\pgfqpoint{0.608cm}{1.276cm}}{\pgfqpoint{0.643cm}{1.262cm}}{\pgfqpoint{0.679cm}{1.262cm}}
\pgfpathcurveto{\pgfqpoint{0.715cm}{1.262cm}}{\pgfqpoint{0.75cm}{1.276cm}}{\pgfqpoint{0.775cm}{1.302cm}}
\pgfpathcurveto{\pgfqpoint{0.801cm}{1.328cm}}{\pgfqpoint{0.815cm}{1.363cm}}{\pgfqpoint{0.815cm}{1.399cm}}
\pgfusepath{fill}
\pgfpathmoveto{\pgfqpoint{1.345cm}{1.371cm}}
\pgfpathcurveto{\pgfqpoint{1.345cm}{1.408cm}}{\pgfqpoint{1.331cm}{1.442cm}}{\pgfqpoint{1.305cm}{1.468cm}}
\pgfpathcurveto{\pgfqpoint{1.28cm}{1.494cm}}{\pgfqpoint{1.245cm}{1.508cm}}{\pgfqpoint{1.209cm}{1.508cm}}
\pgfpathcurveto{\pgfqpoint{1.172cm}{1.508cm}}{\pgfqpoint{1.138cm}{1.494cm}}{\pgfqpoint{1.112cm}{1.468cm}}
\pgfpathcurveto{\pgfqpoint{1.087cm}{1.442cm}}{\pgfqpoint{1.072cm}{1.408cm}}{\pgfqpoint{1.072cm}{1.371cm}}
\pgfpathcurveto{\pgfqpoint{1.072cm}{1.335cm}}{\pgfqpoint{1.087cm}{1.3cm}}{\pgfqpoint{1.112cm}{1.274cm}}
\pgfpathcurveto{\pgfqpoint{1.138cm}{1.249cm}}{\pgfqpoint{1.172cm}{1.234cm}}{\pgfqpoint{1.209cm}{1.234cm}}
\pgfpathcurveto{\pgfqpoint{1.245cm}{1.234cm}}{\pgfqpoint{1.28cm}{1.249cm}}{\pgfqpoint{1.305cm}{1.274cm}}
\pgfpathcurveto{\pgfqpoint{1.331cm}{1.3cm}}{\pgfqpoint{1.345cm}{1.335cm}}{\pgfqpoint{1.345cm}{1.371cm}}
\pgfusepath{fill}
\begin{pgfscope}
\pgfsetdash{}{0cm}
\pgfsetlinewidth{0.818mm}
\pgfsetroundcap
\pgfsetmiterlimit{4.0}
\pgfpathmoveto{\pgfqpoint{0.682cm}{0.671cm}}
\pgfpathlineto{\pgfqpoint{0.682cm}{0.042cm}}
\pgfusepath{stroke}
\end{pgfscope}
\end{pgfscope}
\end{pgfscope}
\end{pgfscope}
\end{tikzpicture}}}
     \|_{C_T \CC^{1 / 2 - \kappa, \varepsilon} (\rho^{\sigma})}, \]
  \[ \| Y_{M, \varepsilon} \|_{C_T^{\beta / 2} L^{\infty, \varepsilon}
     (\rho^{\sigma})} \lesssim \lambda[ \| X_{M, \varepsilon}^{\!\resizebox{0.6em}{!}{
\begin{tikzpicture}
\pgfpathmoveto{\pgfqpoint{0cm}{-0.035cm}}
\pgfpathlineto{\pgfqpoint{1.376cm}{-0.035cm}}
\pgfpathlineto{\pgfqpoint{1.376cm}{1.552cm}}
\pgfpathlineto{\pgfqpoint{0cm}{1.552cm}}
\pgfpathclose
\pgfusepath{clip}
\begin{pgfscope}
\begin{pgfscope}
\pgfpathmoveto{\pgfqpoint{0cm}{-0.035cm}}
\pgfpathlineto{\pgfqpoint{1.376cm}{-0.035cm}}
\pgfpathlineto{\pgfqpoint{1.376cm}{1.552cm}}
\pgfpathlineto{\pgfqpoint{0cm}{1.552cm}}
\pgfpathclose
\pgfusepath{clip}
\begin{pgfscope}
\begin{pgfscope}
\pgfsetdash{}{0cm}
\pgfsetlinewidth{0.818mm}
\pgfsetroundcap
\pgfsetroundjoin
\pgfsetmiterlimit{7.0}
\definecolor{eps2pgf_color}{gray}{0}\pgfsetstrokecolor{eps2pgf_color}\pgfsetfillcolor{eps2pgf_color}
\pgfpathmoveto{\pgfqpoint{0.117cm}{1.421cm}}
\pgfpathlineto{\pgfqpoint{0.682cm}{0.671cm}}
\pgfpathlineto{\pgfqpoint{1.246cm}{1.421cm}}
\pgfusepath{stroke}
\end{pgfscope}
\definecolor{eps2pgf_color}{gray}{0}\pgfsetstrokecolor{eps2pgf_color}\pgfsetfillcolor{eps2pgf_color}
\pgfpathmoveto{\pgfqpoint{0.273cm}{1.395cm}}
\pgfpathcurveto{\pgfqpoint{0.273cm}{1.432cm}}{\pgfqpoint{0.259cm}{1.467cm}}{\pgfqpoint{0.233cm}{1.492cm}}
\pgfpathcurveto{\pgfqpoint{0.207cm}{1.518cm}}{\pgfqpoint{0.173cm}{1.532cm}}{\pgfqpoint{0.137cm}{1.532cm}}
\pgfpathcurveto{\pgfqpoint{0.1cm}{1.532cm}}{\pgfqpoint{0.066cm}{1.518cm}}{\pgfqpoint{0.04cm}{1.492cm}}
\pgfpathcurveto{\pgfqpoint{0.014cm}{1.467cm}}{\pgfqpoint{0cm}{1.432cm}}{\pgfqpoint{0cm}{1.395cm}}
\pgfpathcurveto{\pgfqpoint{0cm}{1.359cm}}{\pgfqpoint{0.014cm}{1.324cm}}{\pgfqpoint{0.04cm}{1.299cm}}
\pgfpathcurveto{\pgfqpoint{0.066cm}{1.273cm}}{\pgfqpoint{0.1cm}{1.258cm}}{\pgfqpoint{0.137cm}{1.258cm}}
\pgfpathcurveto{\pgfqpoint{0.173cm}{1.258cm}}{\pgfqpoint{0.207cm}{1.273cm}}{\pgfqpoint{0.233cm}{1.299cm}}
\pgfpathcurveto{\pgfqpoint{0.259cm}{1.324cm}}{\pgfqpoint{0.273cm}{1.359cm}}{\pgfqpoint{0.273cm}{1.395cm}}
\pgfusepath{fill}
\begin{pgfscope}
\pgfsetdash{}{0cm}
\pgfsetlinewidth{0.818mm}
\pgfsetmiterlimit{7.0}
\pgfpathmoveto{\pgfqpoint{0.682cm}{0.671cm}}
\pgfpathlineto{\pgfqpoint{0.679cm}{1.418cm}}
\pgfusepath{stroke}
\end{pgfscope}
\pgfpathmoveto{\pgfqpoint{0.815cm}{1.399cm}}
\pgfpathcurveto{\pgfqpoint{0.815cm}{1.435cm}}{\pgfqpoint{0.801cm}{1.47cm}}{\pgfqpoint{0.775cm}{1.496cm}}
\pgfpathcurveto{\pgfqpoint{0.75cm}{1.521cm}}{\pgfqpoint{0.715cm}{1.536cm}}{\pgfqpoint{0.679cm}{1.536cm}}
\pgfpathcurveto{\pgfqpoint{0.643cm}{1.536cm}}{\pgfqpoint{0.608cm}{1.521cm}}{\pgfqpoint{0.582cm}{1.496cm}}
\pgfpathcurveto{\pgfqpoint{0.557cm}{1.47cm}}{\pgfqpoint{0.542cm}{1.435cm}}{\pgfqpoint{0.542cm}{1.399cm}}
\pgfpathcurveto{\pgfqpoint{0.542cm}{1.363cm}}{\pgfqpoint{0.557cm}{1.328cm}}{\pgfqpoint{0.582cm}{1.302cm}}
\pgfpathcurveto{\pgfqpoint{0.608cm}{1.276cm}}{\pgfqpoint{0.643cm}{1.262cm}}{\pgfqpoint{0.679cm}{1.262cm}}
\pgfpathcurveto{\pgfqpoint{0.715cm}{1.262cm}}{\pgfqpoint{0.75cm}{1.276cm}}{\pgfqpoint{0.775cm}{1.302cm}}
\pgfpathcurveto{\pgfqpoint{0.801cm}{1.328cm}}{\pgfqpoint{0.815cm}{1.363cm}}{\pgfqpoint{0.815cm}{1.399cm}}
\pgfusepath{fill}
\pgfpathmoveto{\pgfqpoint{1.345cm}{1.371cm}}
\pgfpathcurveto{\pgfqpoint{1.345cm}{1.408cm}}{\pgfqpoint{1.331cm}{1.442cm}}{\pgfqpoint{1.305cm}{1.468cm}}
\pgfpathcurveto{\pgfqpoint{1.28cm}{1.494cm}}{\pgfqpoint{1.245cm}{1.508cm}}{\pgfqpoint{1.209cm}{1.508cm}}
\pgfpathcurveto{\pgfqpoint{1.172cm}{1.508cm}}{\pgfqpoint{1.138cm}{1.494cm}}{\pgfqpoint{1.112cm}{1.468cm}}
\pgfpathcurveto{\pgfqpoint{1.087cm}{1.442cm}}{\pgfqpoint{1.072cm}{1.408cm}}{\pgfqpoint{1.072cm}{1.371cm}}
\pgfpathcurveto{\pgfqpoint{1.072cm}{1.335cm}}{\pgfqpoint{1.087cm}{1.3cm}}{\pgfqpoint{1.112cm}{1.274cm}}
\pgfpathcurveto{\pgfqpoint{1.138cm}{1.249cm}}{\pgfqpoint{1.172cm}{1.234cm}}{\pgfqpoint{1.209cm}{1.234cm}}
\pgfpathcurveto{\pgfqpoint{1.245cm}{1.234cm}}{\pgfqpoint{1.28cm}{1.249cm}}{\pgfqpoint{1.305cm}{1.274cm}}
\pgfpathcurveto{\pgfqpoint{1.331cm}{1.3cm}}{\pgfqpoint{1.345cm}{1.335cm}}{\pgfqpoint{1.345cm}{1.371cm}}
\pgfusepath{fill}
\begin{pgfscope}
\pgfsetdash{}{0cm}
\pgfsetlinewidth{0.818mm}
\pgfsetroundcap
\pgfsetmiterlimit{4.0}
\pgfpathmoveto{\pgfqpoint{0.682cm}{0.671cm}}
\pgfpathlineto{\pgfqpoint{0.682cm}{0.042cm}}
\pgfusepath{stroke}
\end{pgfscope}
\end{pgfscope}
\end{pgfscope}
\end{pgfscope}
\end{tikzpicture}}}
     \|_{C_T \CC^{1 / 2 - \kappa, \varepsilon} (\rho^{\sigma})} +
     \| X_{M, \varepsilon}^{\!\resizebox{0.6em}{!}{
\begin{tikzpicture}
\pgfpathmoveto{\pgfqpoint{0cm}{-0.035cm}}
\pgfpathlineto{\pgfqpoint{1.376cm}{-0.035cm}}
\pgfpathlineto{\pgfqpoint{1.376cm}{1.552cm}}
\pgfpathlineto{\pgfqpoint{0cm}{1.552cm}}
\pgfpathclose
\pgfusepath{clip}
\begin{pgfscope}
\begin{pgfscope}
\pgfpathmoveto{\pgfqpoint{0cm}{-0.035cm}}
\pgfpathlineto{\pgfqpoint{1.376cm}{-0.035cm}}
\pgfpathlineto{\pgfqpoint{1.376cm}{1.552cm}}
\pgfpathlineto{\pgfqpoint{0cm}{1.552cm}}
\pgfpathclose
\pgfusepath{clip}
\begin{pgfscope}
\begin{pgfscope}
\pgfsetdash{}{0cm}
\pgfsetlinewidth{0.818mm}
\pgfsetroundcap
\pgfsetroundjoin
\pgfsetmiterlimit{7.0}
\definecolor{eps2pgf_color}{gray}{0}\pgfsetstrokecolor{eps2pgf_color}\pgfsetfillcolor{eps2pgf_color}
\pgfpathmoveto{\pgfqpoint{0.117cm}{1.421cm}}
\pgfpathlineto{\pgfqpoint{0.682cm}{0.671cm}}
\pgfpathlineto{\pgfqpoint{1.246cm}{1.421cm}}
\pgfusepath{stroke}
\end{pgfscope}
\definecolor{eps2pgf_color}{gray}{0}\pgfsetstrokecolor{eps2pgf_color}\pgfsetfillcolor{eps2pgf_color}
\pgfpathmoveto{\pgfqpoint{0.273cm}{1.395cm}}
\pgfpathcurveto{\pgfqpoint{0.273cm}{1.432cm}}{\pgfqpoint{0.259cm}{1.467cm}}{\pgfqpoint{0.233cm}{1.492cm}}
\pgfpathcurveto{\pgfqpoint{0.207cm}{1.518cm}}{\pgfqpoint{0.173cm}{1.532cm}}{\pgfqpoint{0.137cm}{1.532cm}}
\pgfpathcurveto{\pgfqpoint{0.1cm}{1.532cm}}{\pgfqpoint{0.066cm}{1.518cm}}{\pgfqpoint{0.04cm}{1.492cm}}
\pgfpathcurveto{\pgfqpoint{0.014cm}{1.467cm}}{\pgfqpoint{0cm}{1.432cm}}{\pgfqpoint{0cm}{1.395cm}}
\pgfpathcurveto{\pgfqpoint{0cm}{1.359cm}}{\pgfqpoint{0.014cm}{1.324cm}}{\pgfqpoint{0.04cm}{1.299cm}}
\pgfpathcurveto{\pgfqpoint{0.066cm}{1.273cm}}{\pgfqpoint{0.1cm}{1.258cm}}{\pgfqpoint{0.137cm}{1.258cm}}
\pgfpathcurveto{\pgfqpoint{0.173cm}{1.258cm}}{\pgfqpoint{0.207cm}{1.273cm}}{\pgfqpoint{0.233cm}{1.299cm}}
\pgfpathcurveto{\pgfqpoint{0.259cm}{1.324cm}}{\pgfqpoint{0.273cm}{1.359cm}}{\pgfqpoint{0.273cm}{1.395cm}}
\pgfusepath{fill}
\begin{pgfscope}
\pgfsetdash{}{0cm}
\pgfsetlinewidth{0.818mm}
\pgfsetmiterlimit{7.0}
\pgfpathmoveto{\pgfqpoint{0.682cm}{0.671cm}}
\pgfpathlineto{\pgfqpoint{0.679cm}{1.418cm}}
\pgfusepath{stroke}
\end{pgfscope}
\pgfpathmoveto{\pgfqpoint{0.815cm}{1.399cm}}
\pgfpathcurveto{\pgfqpoint{0.815cm}{1.435cm}}{\pgfqpoint{0.801cm}{1.47cm}}{\pgfqpoint{0.775cm}{1.496cm}}
\pgfpathcurveto{\pgfqpoint{0.75cm}{1.521cm}}{\pgfqpoint{0.715cm}{1.536cm}}{\pgfqpoint{0.679cm}{1.536cm}}
\pgfpathcurveto{\pgfqpoint{0.643cm}{1.536cm}}{\pgfqpoint{0.608cm}{1.521cm}}{\pgfqpoint{0.582cm}{1.496cm}}
\pgfpathcurveto{\pgfqpoint{0.557cm}{1.47cm}}{\pgfqpoint{0.542cm}{1.435cm}}{\pgfqpoint{0.542cm}{1.399cm}}
\pgfpathcurveto{\pgfqpoint{0.542cm}{1.363cm}}{\pgfqpoint{0.557cm}{1.328cm}}{\pgfqpoint{0.582cm}{1.302cm}}
\pgfpathcurveto{\pgfqpoint{0.608cm}{1.276cm}}{\pgfqpoint{0.643cm}{1.262cm}}{\pgfqpoint{0.679cm}{1.262cm}}
\pgfpathcurveto{\pgfqpoint{0.715cm}{1.262cm}}{\pgfqpoint{0.75cm}{1.276cm}}{\pgfqpoint{0.775cm}{1.302cm}}
\pgfpathcurveto{\pgfqpoint{0.801cm}{1.328cm}}{\pgfqpoint{0.815cm}{1.363cm}}{\pgfqpoint{0.815cm}{1.399cm}}
\pgfusepath{fill}
\pgfpathmoveto{\pgfqpoint{1.345cm}{1.371cm}}
\pgfpathcurveto{\pgfqpoint{1.345cm}{1.408cm}}{\pgfqpoint{1.331cm}{1.442cm}}{\pgfqpoint{1.305cm}{1.468cm}}
\pgfpathcurveto{\pgfqpoint{1.28cm}{1.494cm}}{\pgfqpoint{1.245cm}{1.508cm}}{\pgfqpoint{1.209cm}{1.508cm}}
\pgfpathcurveto{\pgfqpoint{1.172cm}{1.508cm}}{\pgfqpoint{1.138cm}{1.494cm}}{\pgfqpoint{1.112cm}{1.468cm}}
\pgfpathcurveto{\pgfqpoint{1.087cm}{1.442cm}}{\pgfqpoint{1.072cm}{1.408cm}}{\pgfqpoint{1.072cm}{1.371cm}}
\pgfpathcurveto{\pgfqpoint{1.072cm}{1.335cm}}{\pgfqpoint{1.087cm}{1.3cm}}{\pgfqpoint{1.112cm}{1.274cm}}
\pgfpathcurveto{\pgfqpoint{1.138cm}{1.249cm}}{\pgfqpoint{1.172cm}{1.234cm}}{\pgfqpoint{1.209cm}{1.234cm}}
\pgfpathcurveto{\pgfqpoint{1.245cm}{1.234cm}}{\pgfqpoint{1.28cm}{1.249cm}}{\pgfqpoint{1.305cm}{1.274cm}}
\pgfpathcurveto{\pgfqpoint{1.331cm}{1.3cm}}{\pgfqpoint{1.345cm}{1.335cm}}{\pgfqpoint{1.345cm}{1.371cm}}
\pgfusepath{fill}
\begin{pgfscope}
\pgfsetdash{}{0cm}
\pgfsetlinewidth{0.818mm}
\pgfsetroundcap
\pgfsetmiterlimit{4.0}
\pgfpathmoveto{\pgfqpoint{0.682cm}{0.671cm}}
\pgfpathlineto{\pgfqpoint{0.682cm}{0.042cm}}
\pgfusepath{stroke}
\end{pgfscope}
\end{pgfscope}
\end{pgfscope}
\end{pgfscope}
\end{tikzpicture}}} \|_{C_T^{\beta / 2}
     L^{\infty, \varepsilon} (\rho^{\sigma})}], \]
  where the proportionality constant is independent of $M,\varepsilon$.
\end{lemma}

\begin{proof}
  We define a fixed point map 
  \[
  \mathcal{K} : \tilde{Y} \mapsto Y
  \assign -\lambda X_{M, \varepsilon}^{\!\resizebox{0.6em}{!}{
\begin{tikzpicture}
\pgfpathmoveto{\pgfqpoint{0cm}{-0.035cm}}
\pgfpathlineto{\pgfqpoint{1.376cm}{-0.035cm}}
\pgfpathlineto{\pgfqpoint{1.376cm}{1.552cm}}
\pgfpathlineto{\pgfqpoint{0cm}{1.552cm}}
\pgfpathclose
\pgfusepath{clip}
\begin{pgfscope}
\begin{pgfscope}
\pgfpathmoveto{\pgfqpoint{0cm}{-0.035cm}}
\pgfpathlineto{\pgfqpoint{1.376cm}{-0.035cm}}
\pgfpathlineto{\pgfqpoint{1.376cm}{1.552cm}}
\pgfpathlineto{\pgfqpoint{0cm}{1.552cm}}
\pgfpathclose
\pgfusepath{clip}
\begin{pgfscope}
\begin{pgfscope}
\pgfsetdash{}{0cm}
\pgfsetlinewidth{0.818mm}
\pgfsetroundcap
\pgfsetroundjoin
\pgfsetmiterlimit{7.0}
\definecolor{eps2pgf_color}{gray}{0}\pgfsetstrokecolor{eps2pgf_color}\pgfsetfillcolor{eps2pgf_color}
\pgfpathmoveto{\pgfqpoint{0.117cm}{1.421cm}}
\pgfpathlineto{\pgfqpoint{0.682cm}{0.671cm}}
\pgfpathlineto{\pgfqpoint{1.246cm}{1.421cm}}
\pgfusepath{stroke}
\end{pgfscope}
\definecolor{eps2pgf_color}{gray}{0}\pgfsetstrokecolor{eps2pgf_color}\pgfsetfillcolor{eps2pgf_color}
\pgfpathmoveto{\pgfqpoint{0.273cm}{1.395cm}}
\pgfpathcurveto{\pgfqpoint{0.273cm}{1.432cm}}{\pgfqpoint{0.259cm}{1.467cm}}{\pgfqpoint{0.233cm}{1.492cm}}
\pgfpathcurveto{\pgfqpoint{0.207cm}{1.518cm}}{\pgfqpoint{0.173cm}{1.532cm}}{\pgfqpoint{0.137cm}{1.532cm}}
\pgfpathcurveto{\pgfqpoint{0.1cm}{1.532cm}}{\pgfqpoint{0.066cm}{1.518cm}}{\pgfqpoint{0.04cm}{1.492cm}}
\pgfpathcurveto{\pgfqpoint{0.014cm}{1.467cm}}{\pgfqpoint{0cm}{1.432cm}}{\pgfqpoint{0cm}{1.395cm}}
\pgfpathcurveto{\pgfqpoint{0cm}{1.359cm}}{\pgfqpoint{0.014cm}{1.324cm}}{\pgfqpoint{0.04cm}{1.299cm}}
\pgfpathcurveto{\pgfqpoint{0.066cm}{1.273cm}}{\pgfqpoint{0.1cm}{1.258cm}}{\pgfqpoint{0.137cm}{1.258cm}}
\pgfpathcurveto{\pgfqpoint{0.173cm}{1.258cm}}{\pgfqpoint{0.207cm}{1.273cm}}{\pgfqpoint{0.233cm}{1.299cm}}
\pgfpathcurveto{\pgfqpoint{0.259cm}{1.324cm}}{\pgfqpoint{0.273cm}{1.359cm}}{\pgfqpoint{0.273cm}{1.395cm}}
\pgfusepath{fill}
\begin{pgfscope}
\pgfsetdash{}{0cm}
\pgfsetlinewidth{0.818mm}
\pgfsetmiterlimit{7.0}
\pgfpathmoveto{\pgfqpoint{0.682cm}{0.671cm}}
\pgfpathlineto{\pgfqpoint{0.679cm}{1.418cm}}
\pgfusepath{stroke}
\end{pgfscope}
\pgfpathmoveto{\pgfqpoint{0.815cm}{1.399cm}}
\pgfpathcurveto{\pgfqpoint{0.815cm}{1.435cm}}{\pgfqpoint{0.801cm}{1.47cm}}{\pgfqpoint{0.775cm}{1.496cm}}
\pgfpathcurveto{\pgfqpoint{0.75cm}{1.521cm}}{\pgfqpoint{0.715cm}{1.536cm}}{\pgfqpoint{0.679cm}{1.536cm}}
\pgfpathcurveto{\pgfqpoint{0.643cm}{1.536cm}}{\pgfqpoint{0.608cm}{1.521cm}}{\pgfqpoint{0.582cm}{1.496cm}}
\pgfpathcurveto{\pgfqpoint{0.557cm}{1.47cm}}{\pgfqpoint{0.542cm}{1.435cm}}{\pgfqpoint{0.542cm}{1.399cm}}
\pgfpathcurveto{\pgfqpoint{0.542cm}{1.363cm}}{\pgfqpoint{0.557cm}{1.328cm}}{\pgfqpoint{0.582cm}{1.302cm}}
\pgfpathcurveto{\pgfqpoint{0.608cm}{1.276cm}}{\pgfqpoint{0.643cm}{1.262cm}}{\pgfqpoint{0.679cm}{1.262cm}}
\pgfpathcurveto{\pgfqpoint{0.715cm}{1.262cm}}{\pgfqpoint{0.75cm}{1.276cm}}{\pgfqpoint{0.775cm}{1.302cm}}
\pgfpathcurveto{\pgfqpoint{0.801cm}{1.328cm}}{\pgfqpoint{0.815cm}{1.363cm}}{\pgfqpoint{0.815cm}{1.399cm}}
\pgfusepath{fill}
\pgfpathmoveto{\pgfqpoint{1.345cm}{1.371cm}}
\pgfpathcurveto{\pgfqpoint{1.345cm}{1.408cm}}{\pgfqpoint{1.331cm}{1.442cm}}{\pgfqpoint{1.305cm}{1.468cm}}
\pgfpathcurveto{\pgfqpoint{1.28cm}{1.494cm}}{\pgfqpoint{1.245cm}{1.508cm}}{\pgfqpoint{1.209cm}{1.508cm}}
\pgfpathcurveto{\pgfqpoint{1.172cm}{1.508cm}}{\pgfqpoint{1.138cm}{1.494cm}}{\pgfqpoint{1.112cm}{1.468cm}}
\pgfpathcurveto{\pgfqpoint{1.087cm}{1.442cm}}{\pgfqpoint{1.072cm}{1.408cm}}{\pgfqpoint{1.072cm}{1.371cm}}
\pgfpathcurveto{\pgfqpoint{1.072cm}{1.335cm}}{\pgfqpoint{1.087cm}{1.3cm}}{\pgfqpoint{1.112cm}{1.274cm}}
\pgfpathcurveto{\pgfqpoint{1.138cm}{1.249cm}}{\pgfqpoint{1.172cm}{1.234cm}}{\pgfqpoint{1.209cm}{1.234cm}}
\pgfpathcurveto{\pgfqpoint{1.245cm}{1.234cm}}{\pgfqpoint{1.28cm}{1.249cm}}{\pgfqpoint{1.305cm}{1.274cm}}
\pgfpathcurveto{\pgfqpoint{1.331cm}{1.3cm}}{\pgfqpoint{1.345cm}{1.335cm}}{\pgfqpoint{1.345cm}{1.371cm}}
\pgfusepath{fill}
\begin{pgfscope}
\pgfsetdash{}{0cm}
\pgfsetlinewidth{0.818mm}
\pgfsetroundcap
\pgfsetmiterlimit{4.0}
\pgfpathmoveto{\pgfqpoint{0.682cm}{0.671cm}}
\pgfpathlineto{\pgfqpoint{0.682cm}{0.042cm}}
\pgfusepath{stroke}
\end{pgfscope}
\end{pgfscope}
\end{pgfscope}
\end{pgfscope}
\end{tikzpicture}}} - \LL_{\varepsilon}^{- 1} [ 3\lambda
  ( \UU^{\varepsilon}_{>} \llbracket X_{M, \varepsilon}^2 \rrbracket ) \succ
  \tilde{Y} ]
  \] for some $L > 0$ to be chosen below. Then in
  view of the Schauder estimates from Lemma~3.4 in {\cite{MP17}}, the
  paraproduct estimates as well as Lemma~\ref{lem:loc}, we have
  \[ \| \mathcal{K} \tilde{Y}_1 - \mathcal{K} \tilde{Y}_2 \|_{C_T \CC^{1 / 2 -
     \kappa, \varepsilon} (\rho^{\sigma})} \lesssim \lambda \| (
     \UU^{\varepsilon}_{>} \llbracket X_{M, \varepsilon}^2 \rrbracket )
     \succ (\tilde{Y_1} - \tilde{Y_2}) \|_{C_T \CC^{- 3 / 2 -
     \kappa, \varepsilon} (\rho^{\sigma})} \]
  \[ \leqslant C \lambda 2^{- L / 2} \| \llbracket X_{M, \varepsilon}^2 \rrbracket
     \|_{C_T \CC^{- 1 - \kappa, \varepsilon} (\rho^{\sigma})} \|
     \tilde{Y_1} - \tilde{Y_2} \|_{C_T L^{\infty, \varepsilon}
     (\rho^{\sigma})} \leqslant \delta \| \tilde{Y_1} - \tilde{Y_2}
     \|_{C_T \CC^{1 / 2 - \kappa, \varepsilon} (\rho^{\sigma})} \]
  for some $\delta \in (0, 1)$ independent of $\lambda, M,\varepsilon$ provided $L=L(\lambda, M,\varepsilon)$ in the definition of the localizer
  $\UU^{\varepsilon}_{>}$ is chosen to be the smallest $L\geqslant 0$ such that
  \[ \lambda \left\| \UU^{\varepsilon}_{>} \llbracket X_{M, \varepsilon}^2 \rrbracket
     \right\|_{C_T \CC^{- 3 / 2 - \kappa, \varepsilon} (\rho^0)} \leqslant C \lambda
     2^{- L / 2} \| \llbracket X_{M, \varepsilon}^2 \rrbracket \|_{C_T \CC^{-
     1 - \kappa, \varepsilon} (\rho^{\sigma})} \leqslant \delta . \]
In particular, we have that
 \begin{equation}
 2^{L/2}= C_{\delta}( 1+\lambda \| \llbracket X_{M, \varepsilon}^2 \rrbracket \|_{C_T \CC^{-
     1 - \kappa, \varepsilon} (\rho^{\sigma})}),
    \label{eq:U11}
  \end{equation}
  which will be used later in order to estimate the complementary operator $\UU^{\varepsilon}_{\leqslant}$ by Lemma~\ref{lem:loc}.
Note that   $L(\lambda,{M,\varepsilon})$ a~priori depends on $M,\varepsilon$. However, due to the uniform bound on
$$\|\llbracket X^{2}_{M,\varepsilon}\rrbracket\|_{C_{T}\CC^{-1-\kappa/2,\varepsilon}(\rho^{\sigma})}+\|\llbracket X^{2}_{M,\varepsilon}\rrbracket\|_{C^{\gamma/2}_{T}L^{\infty,\varepsilon}(\rho^{\sigma})}$$ valid for some $\gamma\in (0,1)$, we may use compactness to deduce that for every fixed $\lambda>0$ there exists a subsequence (not relabeled) such that $L(\lambda,M,\varepsilon)\to L_{0}(\lambda)$. This will also allow to identify the limit of  the localized term below  in Section \ref{s:sd}.

Next, we estimate
  \[ \| \mathcal{K} \tilde{Y} \|_{C_T \CC^{1 / 2 - \kappa, \varepsilon}
     (\rho^{\sigma})} \leqslant  \lambda \| X_{M, \varepsilon}^{\!\resizebox{0.6em}{!}{
\begin{tikzpicture}
\pgfpathmoveto{\pgfqpoint{0cm}{-0.035cm}}
\pgfpathlineto{\pgfqpoint{1.376cm}{-0.035cm}}
\pgfpathlineto{\pgfqpoint{1.376cm}{1.552cm}}
\pgfpathlineto{\pgfqpoint{0cm}{1.552cm}}
\pgfpathclose
\pgfusepath{clip}
\begin{pgfscope}
\begin{pgfscope}
\pgfpathmoveto{\pgfqpoint{0cm}{-0.035cm}}
\pgfpathlineto{\pgfqpoint{1.376cm}{-0.035cm}}
\pgfpathlineto{\pgfqpoint{1.376cm}{1.552cm}}
\pgfpathlineto{\pgfqpoint{0cm}{1.552cm}}
\pgfpathclose
\pgfusepath{clip}
\begin{pgfscope}
\begin{pgfscope}
\pgfsetdash{}{0cm}
\pgfsetlinewidth{0.818mm}
\pgfsetroundcap
\pgfsetroundjoin
\pgfsetmiterlimit{7.0}
\definecolor{eps2pgf_color}{gray}{0}\pgfsetstrokecolor{eps2pgf_color}\pgfsetfillcolor{eps2pgf_color}
\pgfpathmoveto{\pgfqpoint{0.117cm}{1.421cm}}
\pgfpathlineto{\pgfqpoint{0.682cm}{0.671cm}}
\pgfpathlineto{\pgfqpoint{1.246cm}{1.421cm}}
\pgfusepath{stroke}
\end{pgfscope}
\definecolor{eps2pgf_color}{gray}{0}\pgfsetstrokecolor{eps2pgf_color}\pgfsetfillcolor{eps2pgf_color}
\pgfpathmoveto{\pgfqpoint{0.273cm}{1.395cm}}
\pgfpathcurveto{\pgfqpoint{0.273cm}{1.432cm}}{\pgfqpoint{0.259cm}{1.467cm}}{\pgfqpoint{0.233cm}{1.492cm}}
\pgfpathcurveto{\pgfqpoint{0.207cm}{1.518cm}}{\pgfqpoint{0.173cm}{1.532cm}}{\pgfqpoint{0.137cm}{1.532cm}}
\pgfpathcurveto{\pgfqpoint{0.1cm}{1.532cm}}{\pgfqpoint{0.066cm}{1.518cm}}{\pgfqpoint{0.04cm}{1.492cm}}
\pgfpathcurveto{\pgfqpoint{0.014cm}{1.467cm}}{\pgfqpoint{0cm}{1.432cm}}{\pgfqpoint{0cm}{1.395cm}}
\pgfpathcurveto{\pgfqpoint{0cm}{1.359cm}}{\pgfqpoint{0.014cm}{1.324cm}}{\pgfqpoint{0.04cm}{1.299cm}}
\pgfpathcurveto{\pgfqpoint{0.066cm}{1.273cm}}{\pgfqpoint{0.1cm}{1.258cm}}{\pgfqpoint{0.137cm}{1.258cm}}
\pgfpathcurveto{\pgfqpoint{0.173cm}{1.258cm}}{\pgfqpoint{0.207cm}{1.273cm}}{\pgfqpoint{0.233cm}{1.299cm}}
\pgfpathcurveto{\pgfqpoint{0.259cm}{1.324cm}}{\pgfqpoint{0.273cm}{1.359cm}}{\pgfqpoint{0.273cm}{1.395cm}}
\pgfusepath{fill}
\begin{pgfscope}
\pgfsetdash{}{0cm}
\pgfsetlinewidth{0.818mm}
\pgfsetmiterlimit{7.0}
\pgfpathmoveto{\pgfqpoint{0.682cm}{0.671cm}}
\pgfpathlineto{\pgfqpoint{0.679cm}{1.418cm}}
\pgfusepath{stroke}
\end{pgfscope}
\pgfpathmoveto{\pgfqpoint{0.815cm}{1.399cm}}
\pgfpathcurveto{\pgfqpoint{0.815cm}{1.435cm}}{\pgfqpoint{0.801cm}{1.47cm}}{\pgfqpoint{0.775cm}{1.496cm}}
\pgfpathcurveto{\pgfqpoint{0.75cm}{1.521cm}}{\pgfqpoint{0.715cm}{1.536cm}}{\pgfqpoint{0.679cm}{1.536cm}}
\pgfpathcurveto{\pgfqpoint{0.643cm}{1.536cm}}{\pgfqpoint{0.608cm}{1.521cm}}{\pgfqpoint{0.582cm}{1.496cm}}
\pgfpathcurveto{\pgfqpoint{0.557cm}{1.47cm}}{\pgfqpoint{0.542cm}{1.435cm}}{\pgfqpoint{0.542cm}{1.399cm}}
\pgfpathcurveto{\pgfqpoint{0.542cm}{1.363cm}}{\pgfqpoint{0.557cm}{1.328cm}}{\pgfqpoint{0.582cm}{1.302cm}}
\pgfpathcurveto{\pgfqpoint{0.608cm}{1.276cm}}{\pgfqpoint{0.643cm}{1.262cm}}{\pgfqpoint{0.679cm}{1.262cm}}
\pgfpathcurveto{\pgfqpoint{0.715cm}{1.262cm}}{\pgfqpoint{0.75cm}{1.276cm}}{\pgfqpoint{0.775cm}{1.302cm}}
\pgfpathcurveto{\pgfqpoint{0.801cm}{1.328cm}}{\pgfqpoint{0.815cm}{1.363cm}}{\pgfqpoint{0.815cm}{1.399cm}}
\pgfusepath{fill}
\pgfpathmoveto{\pgfqpoint{1.345cm}{1.371cm}}
\pgfpathcurveto{\pgfqpoint{1.345cm}{1.408cm}}{\pgfqpoint{1.331cm}{1.442cm}}{\pgfqpoint{1.305cm}{1.468cm}}
\pgfpathcurveto{\pgfqpoint{1.28cm}{1.494cm}}{\pgfqpoint{1.245cm}{1.508cm}}{\pgfqpoint{1.209cm}{1.508cm}}
\pgfpathcurveto{\pgfqpoint{1.172cm}{1.508cm}}{\pgfqpoint{1.138cm}{1.494cm}}{\pgfqpoint{1.112cm}{1.468cm}}
\pgfpathcurveto{\pgfqpoint{1.087cm}{1.442cm}}{\pgfqpoint{1.072cm}{1.408cm}}{\pgfqpoint{1.072cm}{1.371cm}}
\pgfpathcurveto{\pgfqpoint{1.072cm}{1.335cm}}{\pgfqpoint{1.087cm}{1.3cm}}{\pgfqpoint{1.112cm}{1.274cm}}
\pgfpathcurveto{\pgfqpoint{1.138cm}{1.249cm}}{\pgfqpoint{1.172cm}{1.234cm}}{\pgfqpoint{1.209cm}{1.234cm}}
\pgfpathcurveto{\pgfqpoint{1.245cm}{1.234cm}}{\pgfqpoint{1.28cm}{1.249cm}}{\pgfqpoint{1.305cm}{1.274cm}}
\pgfpathcurveto{\pgfqpoint{1.331cm}{1.3cm}}{\pgfqpoint{1.345cm}{1.335cm}}{\pgfqpoint{1.345cm}{1.371cm}}
\pgfusepath{fill}
\begin{pgfscope}
\pgfsetdash{}{0cm}
\pgfsetlinewidth{0.818mm}
\pgfsetroundcap
\pgfsetmiterlimit{4.0}
\pgfpathmoveto{\pgfqpoint{0.682cm}{0.671cm}}
\pgfpathlineto{\pgfqpoint{0.682cm}{0.042cm}}
\pgfusepath{stroke}
\end{pgfscope}
\end{pgfscope}
\end{pgfscope}
\end{pgfscope}
\end{tikzpicture}}}
     \|_{C_T \CC^{1 / 2 - \kappa, \varepsilon} (\rho^{\sigma})} + C \lambda
     \| ( \UU^{\varepsilon}_{>} \llbracket X_{M, \varepsilon}^2
     \rrbracket ) \succ \tilde{Y} \|_{C_T \CC^{- 3 / 2 - \kappa,
     \varepsilon} (\rho^{\sigma})} \]
  \[ \leqslant \lambda \| X_{M, \varepsilon}^{\!\resizebox{0.6em}{!}{
\begin{tikzpicture}
\pgfpathmoveto{\pgfqpoint{0cm}{-0.035cm}}
\pgfpathlineto{\pgfqpoint{1.376cm}{-0.035cm}}
\pgfpathlineto{\pgfqpoint{1.376cm}{1.552cm}}
\pgfpathlineto{\pgfqpoint{0cm}{1.552cm}}
\pgfpathclose
\pgfusepath{clip}
\begin{pgfscope}
\begin{pgfscope}
\pgfpathmoveto{\pgfqpoint{0cm}{-0.035cm}}
\pgfpathlineto{\pgfqpoint{1.376cm}{-0.035cm}}
\pgfpathlineto{\pgfqpoint{1.376cm}{1.552cm}}
\pgfpathlineto{\pgfqpoint{0cm}{1.552cm}}
\pgfpathclose
\pgfusepath{clip}
\begin{pgfscope}
\begin{pgfscope}
\pgfsetdash{}{0cm}
\pgfsetlinewidth{0.818mm}
\pgfsetroundcap
\pgfsetroundjoin
\pgfsetmiterlimit{7.0}
\definecolor{eps2pgf_color}{gray}{0}\pgfsetstrokecolor{eps2pgf_color}\pgfsetfillcolor{eps2pgf_color}
\pgfpathmoveto{\pgfqpoint{0.117cm}{1.421cm}}
\pgfpathlineto{\pgfqpoint{0.682cm}{0.671cm}}
\pgfpathlineto{\pgfqpoint{1.246cm}{1.421cm}}
\pgfusepath{stroke}
\end{pgfscope}
\definecolor{eps2pgf_color}{gray}{0}\pgfsetstrokecolor{eps2pgf_color}\pgfsetfillcolor{eps2pgf_color}
\pgfpathmoveto{\pgfqpoint{0.273cm}{1.395cm}}
\pgfpathcurveto{\pgfqpoint{0.273cm}{1.432cm}}{\pgfqpoint{0.259cm}{1.467cm}}{\pgfqpoint{0.233cm}{1.492cm}}
\pgfpathcurveto{\pgfqpoint{0.207cm}{1.518cm}}{\pgfqpoint{0.173cm}{1.532cm}}{\pgfqpoint{0.137cm}{1.532cm}}
\pgfpathcurveto{\pgfqpoint{0.1cm}{1.532cm}}{\pgfqpoint{0.066cm}{1.518cm}}{\pgfqpoint{0.04cm}{1.492cm}}
\pgfpathcurveto{\pgfqpoint{0.014cm}{1.467cm}}{\pgfqpoint{0cm}{1.432cm}}{\pgfqpoint{0cm}{1.395cm}}
\pgfpathcurveto{\pgfqpoint{0cm}{1.359cm}}{\pgfqpoint{0.014cm}{1.324cm}}{\pgfqpoint{0.04cm}{1.299cm}}
\pgfpathcurveto{\pgfqpoint{0.066cm}{1.273cm}}{\pgfqpoint{0.1cm}{1.258cm}}{\pgfqpoint{0.137cm}{1.258cm}}
\pgfpathcurveto{\pgfqpoint{0.173cm}{1.258cm}}{\pgfqpoint{0.207cm}{1.273cm}}{\pgfqpoint{0.233cm}{1.299cm}}
\pgfpathcurveto{\pgfqpoint{0.259cm}{1.324cm}}{\pgfqpoint{0.273cm}{1.359cm}}{\pgfqpoint{0.273cm}{1.395cm}}
\pgfusepath{fill}
\begin{pgfscope}
\pgfsetdash{}{0cm}
\pgfsetlinewidth{0.818mm}
\pgfsetmiterlimit{7.0}
\pgfpathmoveto{\pgfqpoint{0.682cm}{0.671cm}}
\pgfpathlineto{\pgfqpoint{0.679cm}{1.418cm}}
\pgfusepath{stroke}
\end{pgfscope}
\pgfpathmoveto{\pgfqpoint{0.815cm}{1.399cm}}
\pgfpathcurveto{\pgfqpoint{0.815cm}{1.435cm}}{\pgfqpoint{0.801cm}{1.47cm}}{\pgfqpoint{0.775cm}{1.496cm}}
\pgfpathcurveto{\pgfqpoint{0.75cm}{1.521cm}}{\pgfqpoint{0.715cm}{1.536cm}}{\pgfqpoint{0.679cm}{1.536cm}}
\pgfpathcurveto{\pgfqpoint{0.643cm}{1.536cm}}{\pgfqpoint{0.608cm}{1.521cm}}{\pgfqpoint{0.582cm}{1.496cm}}
\pgfpathcurveto{\pgfqpoint{0.557cm}{1.47cm}}{\pgfqpoint{0.542cm}{1.435cm}}{\pgfqpoint{0.542cm}{1.399cm}}
\pgfpathcurveto{\pgfqpoint{0.542cm}{1.363cm}}{\pgfqpoint{0.557cm}{1.328cm}}{\pgfqpoint{0.582cm}{1.302cm}}
\pgfpathcurveto{\pgfqpoint{0.608cm}{1.276cm}}{\pgfqpoint{0.643cm}{1.262cm}}{\pgfqpoint{0.679cm}{1.262cm}}
\pgfpathcurveto{\pgfqpoint{0.715cm}{1.262cm}}{\pgfqpoint{0.75cm}{1.276cm}}{\pgfqpoint{0.775cm}{1.302cm}}
\pgfpathcurveto{\pgfqpoint{0.801cm}{1.328cm}}{\pgfqpoint{0.815cm}{1.363cm}}{\pgfqpoint{0.815cm}{1.399cm}}
\pgfusepath{fill}
\pgfpathmoveto{\pgfqpoint{1.345cm}{1.371cm}}
\pgfpathcurveto{\pgfqpoint{1.345cm}{1.408cm}}{\pgfqpoint{1.331cm}{1.442cm}}{\pgfqpoint{1.305cm}{1.468cm}}
\pgfpathcurveto{\pgfqpoint{1.28cm}{1.494cm}}{\pgfqpoint{1.245cm}{1.508cm}}{\pgfqpoint{1.209cm}{1.508cm}}
\pgfpathcurveto{\pgfqpoint{1.172cm}{1.508cm}}{\pgfqpoint{1.138cm}{1.494cm}}{\pgfqpoint{1.112cm}{1.468cm}}
\pgfpathcurveto{\pgfqpoint{1.087cm}{1.442cm}}{\pgfqpoint{1.072cm}{1.408cm}}{\pgfqpoint{1.072cm}{1.371cm}}
\pgfpathcurveto{\pgfqpoint{1.072cm}{1.335cm}}{\pgfqpoint{1.087cm}{1.3cm}}{\pgfqpoint{1.112cm}{1.274cm}}
\pgfpathcurveto{\pgfqpoint{1.138cm}{1.249cm}}{\pgfqpoint{1.172cm}{1.234cm}}{\pgfqpoint{1.209cm}{1.234cm}}
\pgfpathcurveto{\pgfqpoint{1.245cm}{1.234cm}}{\pgfqpoint{1.28cm}{1.249cm}}{\pgfqpoint{1.305cm}{1.274cm}}
\pgfpathcurveto{\pgfqpoint{1.331cm}{1.3cm}}{\pgfqpoint{1.345cm}{1.335cm}}{\pgfqpoint{1.345cm}{1.371cm}}
\pgfusepath{fill}
\begin{pgfscope}
\pgfsetdash{}{0cm}
\pgfsetlinewidth{0.818mm}
\pgfsetroundcap
\pgfsetmiterlimit{4.0}
\pgfpathmoveto{\pgfqpoint{0.682cm}{0.671cm}}
\pgfpathlineto{\pgfqpoint{0.682cm}{0.042cm}}
\pgfusepath{stroke}
\end{pgfscope}
\end{pgfscope}
\end{pgfscope}
\end{pgfscope}
\end{tikzpicture}}} \|_{C_T \CC^{1 / 2
     - \kappa, \varepsilon} (\rho^{\sigma})} +  \delta \| \tilde{Y} \|_{C_T
     \CC^{1 / 2 - \kappa, \varepsilon} (\rho^{\sigma})} . \]
  Therefore we deduce that $\mathcal{K}$ leaves balls in $C_T \CC^{1 / 2 -
  \kappa, \varepsilon} (\rho^{\sigma})$ invariant and is a contraction on $C_T
  \CC^{1 / 2 - \kappa, \varepsilon} (\rho^{\sigma})$. Hence there exists a
  unique fixed point $Y_{M, \varepsilon}$ and the first bound follows. Next,
  we use the Schauder estimates (see Lemma 3.10 in {\cite{MP17}}) to bound the
  time regularity as follows
  \[ \| Y_{M, \varepsilon} \|_{C_T^{\beta / 2} L^{\infty, \varepsilon}
     (\rho^{\sigma})} \leqslant \lambda \| X_{M, \varepsilon}^{\!\resizebox{0.6em}{!}{
\begin{tikzpicture}
\pgfpathmoveto{\pgfqpoint{0cm}{-0.035cm}}
\pgfpathlineto{\pgfqpoint{1.376cm}{-0.035cm}}
\pgfpathlineto{\pgfqpoint{1.376cm}{1.552cm}}
\pgfpathlineto{\pgfqpoint{0cm}{1.552cm}}
\pgfpathclose
\pgfusepath{clip}
\begin{pgfscope}
\begin{pgfscope}
\pgfpathmoveto{\pgfqpoint{0cm}{-0.035cm}}
\pgfpathlineto{\pgfqpoint{1.376cm}{-0.035cm}}
\pgfpathlineto{\pgfqpoint{1.376cm}{1.552cm}}
\pgfpathlineto{\pgfqpoint{0cm}{1.552cm}}
\pgfpathclose
\pgfusepath{clip}
\begin{pgfscope}
\begin{pgfscope}
\pgfsetdash{}{0cm}
\pgfsetlinewidth{0.818mm}
\pgfsetroundcap
\pgfsetroundjoin
\pgfsetmiterlimit{7.0}
\definecolor{eps2pgf_color}{gray}{0}\pgfsetstrokecolor{eps2pgf_color}\pgfsetfillcolor{eps2pgf_color}
\pgfpathmoveto{\pgfqpoint{0.117cm}{1.421cm}}
\pgfpathlineto{\pgfqpoint{0.682cm}{0.671cm}}
\pgfpathlineto{\pgfqpoint{1.246cm}{1.421cm}}
\pgfusepath{stroke}
\end{pgfscope}
\definecolor{eps2pgf_color}{gray}{0}\pgfsetstrokecolor{eps2pgf_color}\pgfsetfillcolor{eps2pgf_color}
\pgfpathmoveto{\pgfqpoint{0.273cm}{1.395cm}}
\pgfpathcurveto{\pgfqpoint{0.273cm}{1.432cm}}{\pgfqpoint{0.259cm}{1.467cm}}{\pgfqpoint{0.233cm}{1.492cm}}
\pgfpathcurveto{\pgfqpoint{0.207cm}{1.518cm}}{\pgfqpoint{0.173cm}{1.532cm}}{\pgfqpoint{0.137cm}{1.532cm}}
\pgfpathcurveto{\pgfqpoint{0.1cm}{1.532cm}}{\pgfqpoint{0.066cm}{1.518cm}}{\pgfqpoint{0.04cm}{1.492cm}}
\pgfpathcurveto{\pgfqpoint{0.014cm}{1.467cm}}{\pgfqpoint{0cm}{1.432cm}}{\pgfqpoint{0cm}{1.395cm}}
\pgfpathcurveto{\pgfqpoint{0cm}{1.359cm}}{\pgfqpoint{0.014cm}{1.324cm}}{\pgfqpoint{0.04cm}{1.299cm}}
\pgfpathcurveto{\pgfqpoint{0.066cm}{1.273cm}}{\pgfqpoint{0.1cm}{1.258cm}}{\pgfqpoint{0.137cm}{1.258cm}}
\pgfpathcurveto{\pgfqpoint{0.173cm}{1.258cm}}{\pgfqpoint{0.207cm}{1.273cm}}{\pgfqpoint{0.233cm}{1.299cm}}
\pgfpathcurveto{\pgfqpoint{0.259cm}{1.324cm}}{\pgfqpoint{0.273cm}{1.359cm}}{\pgfqpoint{0.273cm}{1.395cm}}
\pgfusepath{fill}
\begin{pgfscope}
\pgfsetdash{}{0cm}
\pgfsetlinewidth{0.818mm}
\pgfsetmiterlimit{7.0}
\pgfpathmoveto{\pgfqpoint{0.682cm}{0.671cm}}
\pgfpathlineto{\pgfqpoint{0.679cm}{1.418cm}}
\pgfusepath{stroke}
\end{pgfscope}
\pgfpathmoveto{\pgfqpoint{0.815cm}{1.399cm}}
\pgfpathcurveto{\pgfqpoint{0.815cm}{1.435cm}}{\pgfqpoint{0.801cm}{1.47cm}}{\pgfqpoint{0.775cm}{1.496cm}}
\pgfpathcurveto{\pgfqpoint{0.75cm}{1.521cm}}{\pgfqpoint{0.715cm}{1.536cm}}{\pgfqpoint{0.679cm}{1.536cm}}
\pgfpathcurveto{\pgfqpoint{0.643cm}{1.536cm}}{\pgfqpoint{0.608cm}{1.521cm}}{\pgfqpoint{0.582cm}{1.496cm}}
\pgfpathcurveto{\pgfqpoint{0.557cm}{1.47cm}}{\pgfqpoint{0.542cm}{1.435cm}}{\pgfqpoint{0.542cm}{1.399cm}}
\pgfpathcurveto{\pgfqpoint{0.542cm}{1.363cm}}{\pgfqpoint{0.557cm}{1.328cm}}{\pgfqpoint{0.582cm}{1.302cm}}
\pgfpathcurveto{\pgfqpoint{0.608cm}{1.276cm}}{\pgfqpoint{0.643cm}{1.262cm}}{\pgfqpoint{0.679cm}{1.262cm}}
\pgfpathcurveto{\pgfqpoint{0.715cm}{1.262cm}}{\pgfqpoint{0.75cm}{1.276cm}}{\pgfqpoint{0.775cm}{1.302cm}}
\pgfpathcurveto{\pgfqpoint{0.801cm}{1.328cm}}{\pgfqpoint{0.815cm}{1.363cm}}{\pgfqpoint{0.815cm}{1.399cm}}
\pgfusepath{fill}
\pgfpathmoveto{\pgfqpoint{1.345cm}{1.371cm}}
\pgfpathcurveto{\pgfqpoint{1.345cm}{1.408cm}}{\pgfqpoint{1.331cm}{1.442cm}}{\pgfqpoint{1.305cm}{1.468cm}}
\pgfpathcurveto{\pgfqpoint{1.28cm}{1.494cm}}{\pgfqpoint{1.245cm}{1.508cm}}{\pgfqpoint{1.209cm}{1.508cm}}
\pgfpathcurveto{\pgfqpoint{1.172cm}{1.508cm}}{\pgfqpoint{1.138cm}{1.494cm}}{\pgfqpoint{1.112cm}{1.468cm}}
\pgfpathcurveto{\pgfqpoint{1.087cm}{1.442cm}}{\pgfqpoint{1.072cm}{1.408cm}}{\pgfqpoint{1.072cm}{1.371cm}}
\pgfpathcurveto{\pgfqpoint{1.072cm}{1.335cm}}{\pgfqpoint{1.087cm}{1.3cm}}{\pgfqpoint{1.112cm}{1.274cm}}
\pgfpathcurveto{\pgfqpoint{1.138cm}{1.249cm}}{\pgfqpoint{1.172cm}{1.234cm}}{\pgfqpoint{1.209cm}{1.234cm}}
\pgfpathcurveto{\pgfqpoint{1.245cm}{1.234cm}}{\pgfqpoint{1.28cm}{1.249cm}}{\pgfqpoint{1.305cm}{1.274cm}}
\pgfpathcurveto{\pgfqpoint{1.331cm}{1.3cm}}{\pgfqpoint{1.345cm}{1.335cm}}{\pgfqpoint{1.345cm}{1.371cm}}
\pgfusepath{fill}
\begin{pgfscope}
\pgfsetdash{}{0cm}
\pgfsetlinewidth{0.818mm}
\pgfsetroundcap
\pgfsetmiterlimit{4.0}
\pgfpathmoveto{\pgfqpoint{0.682cm}{0.671cm}}
\pgfpathlineto{\pgfqpoint{0.682cm}{0.042cm}}
\pgfusepath{stroke}
\end{pgfscope}
\end{pgfscope}
\end{pgfscope}
\end{pgfscope}
\end{tikzpicture}}}
     \|_{C_T^{\beta / 2} L^{\infty, \varepsilon} (\rho^{\sigma})} + C \lambda
     \| ( \UU^{\varepsilon}_{>} \llbracket X_{M, \varepsilon}^2
     \rrbracket ) \succ Y_{M, \varepsilon} \|_{C_T \CC^{- 3 / 2 -
     \kappa, \varepsilon} (\rho^{\sigma})} \]
  \[ \leqslant \lambda \| X_{M, \varepsilon}^{\!\resizebox{0.6em}{!}{
\begin{tikzpicture}
\pgfpathmoveto{\pgfqpoint{0cm}{-0.035cm}}
\pgfpathlineto{\pgfqpoint{1.376cm}{-0.035cm}}
\pgfpathlineto{\pgfqpoint{1.376cm}{1.552cm}}
\pgfpathlineto{\pgfqpoint{0cm}{1.552cm}}
\pgfpathclose
\pgfusepath{clip}
\begin{pgfscope}
\begin{pgfscope}
\pgfpathmoveto{\pgfqpoint{0cm}{-0.035cm}}
\pgfpathlineto{\pgfqpoint{1.376cm}{-0.035cm}}
\pgfpathlineto{\pgfqpoint{1.376cm}{1.552cm}}
\pgfpathlineto{\pgfqpoint{0cm}{1.552cm}}
\pgfpathclose
\pgfusepath{clip}
\begin{pgfscope}
\begin{pgfscope}
\pgfsetdash{}{0cm}
\pgfsetlinewidth{0.818mm}
\pgfsetroundcap
\pgfsetroundjoin
\pgfsetmiterlimit{7.0}
\definecolor{eps2pgf_color}{gray}{0}\pgfsetstrokecolor{eps2pgf_color}\pgfsetfillcolor{eps2pgf_color}
\pgfpathmoveto{\pgfqpoint{0.117cm}{1.421cm}}
\pgfpathlineto{\pgfqpoint{0.682cm}{0.671cm}}
\pgfpathlineto{\pgfqpoint{1.246cm}{1.421cm}}
\pgfusepath{stroke}
\end{pgfscope}
\definecolor{eps2pgf_color}{gray}{0}\pgfsetstrokecolor{eps2pgf_color}\pgfsetfillcolor{eps2pgf_color}
\pgfpathmoveto{\pgfqpoint{0.273cm}{1.395cm}}
\pgfpathcurveto{\pgfqpoint{0.273cm}{1.432cm}}{\pgfqpoint{0.259cm}{1.467cm}}{\pgfqpoint{0.233cm}{1.492cm}}
\pgfpathcurveto{\pgfqpoint{0.207cm}{1.518cm}}{\pgfqpoint{0.173cm}{1.532cm}}{\pgfqpoint{0.137cm}{1.532cm}}
\pgfpathcurveto{\pgfqpoint{0.1cm}{1.532cm}}{\pgfqpoint{0.066cm}{1.518cm}}{\pgfqpoint{0.04cm}{1.492cm}}
\pgfpathcurveto{\pgfqpoint{0.014cm}{1.467cm}}{\pgfqpoint{0cm}{1.432cm}}{\pgfqpoint{0cm}{1.395cm}}
\pgfpathcurveto{\pgfqpoint{0cm}{1.359cm}}{\pgfqpoint{0.014cm}{1.324cm}}{\pgfqpoint{0.04cm}{1.299cm}}
\pgfpathcurveto{\pgfqpoint{0.066cm}{1.273cm}}{\pgfqpoint{0.1cm}{1.258cm}}{\pgfqpoint{0.137cm}{1.258cm}}
\pgfpathcurveto{\pgfqpoint{0.173cm}{1.258cm}}{\pgfqpoint{0.207cm}{1.273cm}}{\pgfqpoint{0.233cm}{1.299cm}}
\pgfpathcurveto{\pgfqpoint{0.259cm}{1.324cm}}{\pgfqpoint{0.273cm}{1.359cm}}{\pgfqpoint{0.273cm}{1.395cm}}
\pgfusepath{fill}
\begin{pgfscope}
\pgfsetdash{}{0cm}
\pgfsetlinewidth{0.818mm}
\pgfsetmiterlimit{7.0}
\pgfpathmoveto{\pgfqpoint{0.682cm}{0.671cm}}
\pgfpathlineto{\pgfqpoint{0.679cm}{1.418cm}}
\pgfusepath{stroke}
\end{pgfscope}
\pgfpathmoveto{\pgfqpoint{0.815cm}{1.399cm}}
\pgfpathcurveto{\pgfqpoint{0.815cm}{1.435cm}}{\pgfqpoint{0.801cm}{1.47cm}}{\pgfqpoint{0.775cm}{1.496cm}}
\pgfpathcurveto{\pgfqpoint{0.75cm}{1.521cm}}{\pgfqpoint{0.715cm}{1.536cm}}{\pgfqpoint{0.679cm}{1.536cm}}
\pgfpathcurveto{\pgfqpoint{0.643cm}{1.536cm}}{\pgfqpoint{0.608cm}{1.521cm}}{\pgfqpoint{0.582cm}{1.496cm}}
\pgfpathcurveto{\pgfqpoint{0.557cm}{1.47cm}}{\pgfqpoint{0.542cm}{1.435cm}}{\pgfqpoint{0.542cm}{1.399cm}}
\pgfpathcurveto{\pgfqpoint{0.542cm}{1.363cm}}{\pgfqpoint{0.557cm}{1.328cm}}{\pgfqpoint{0.582cm}{1.302cm}}
\pgfpathcurveto{\pgfqpoint{0.608cm}{1.276cm}}{\pgfqpoint{0.643cm}{1.262cm}}{\pgfqpoint{0.679cm}{1.262cm}}
\pgfpathcurveto{\pgfqpoint{0.715cm}{1.262cm}}{\pgfqpoint{0.75cm}{1.276cm}}{\pgfqpoint{0.775cm}{1.302cm}}
\pgfpathcurveto{\pgfqpoint{0.801cm}{1.328cm}}{\pgfqpoint{0.815cm}{1.363cm}}{\pgfqpoint{0.815cm}{1.399cm}}
\pgfusepath{fill}
\pgfpathmoveto{\pgfqpoint{1.345cm}{1.371cm}}
\pgfpathcurveto{\pgfqpoint{1.345cm}{1.408cm}}{\pgfqpoint{1.331cm}{1.442cm}}{\pgfqpoint{1.305cm}{1.468cm}}
\pgfpathcurveto{\pgfqpoint{1.28cm}{1.494cm}}{\pgfqpoint{1.245cm}{1.508cm}}{\pgfqpoint{1.209cm}{1.508cm}}
\pgfpathcurveto{\pgfqpoint{1.172cm}{1.508cm}}{\pgfqpoint{1.138cm}{1.494cm}}{\pgfqpoint{1.112cm}{1.468cm}}
\pgfpathcurveto{\pgfqpoint{1.087cm}{1.442cm}}{\pgfqpoint{1.072cm}{1.408cm}}{\pgfqpoint{1.072cm}{1.371cm}}
\pgfpathcurveto{\pgfqpoint{1.072cm}{1.335cm}}{\pgfqpoint{1.087cm}{1.3cm}}{\pgfqpoint{1.112cm}{1.274cm}}
\pgfpathcurveto{\pgfqpoint{1.138cm}{1.249cm}}{\pgfqpoint{1.172cm}{1.234cm}}{\pgfqpoint{1.209cm}{1.234cm}}
\pgfpathcurveto{\pgfqpoint{1.245cm}{1.234cm}}{\pgfqpoint{1.28cm}{1.249cm}}{\pgfqpoint{1.305cm}{1.274cm}}
\pgfpathcurveto{\pgfqpoint{1.331cm}{1.3cm}}{\pgfqpoint{1.345cm}{1.335cm}}{\pgfqpoint{1.345cm}{1.371cm}}
\pgfusepath{fill}
\begin{pgfscope}
\pgfsetdash{}{0cm}
\pgfsetlinewidth{0.818mm}
\pgfsetroundcap
\pgfsetmiterlimit{4.0}
\pgfpathmoveto{\pgfqpoint{0.682cm}{0.671cm}}
\pgfpathlineto{\pgfqpoint{0.682cm}{0.042cm}}
\pgfusepath{stroke}
\end{pgfscope}
\end{pgfscope}
\end{pgfscope}
\end{pgfscope}
\end{tikzpicture}}} \|_{C_T^{\beta /
     2} L^{\infty, \varepsilon} (\rho^{\sigma})} +  \delta \| Y_{M,
     \varepsilon} \|_{C_T \CC^{1 / 2 - \kappa, \varepsilon} (\rho^{\sigma})}
  \]
  \[ \lesssim \lambda \| X_{M, \varepsilon}^{\!\resizebox{0.6em}{!}{
\begin{tikzpicture}
\pgfpathmoveto{\pgfqpoint{0cm}{-0.035cm}}
\pgfpathlineto{\pgfqpoint{1.376cm}{-0.035cm}}
\pgfpathlineto{\pgfqpoint{1.376cm}{1.552cm}}
\pgfpathlineto{\pgfqpoint{0cm}{1.552cm}}
\pgfpathclose
\pgfusepath{clip}
\begin{pgfscope}
\begin{pgfscope}
\pgfpathmoveto{\pgfqpoint{0cm}{-0.035cm}}
\pgfpathlineto{\pgfqpoint{1.376cm}{-0.035cm}}
\pgfpathlineto{\pgfqpoint{1.376cm}{1.552cm}}
\pgfpathlineto{\pgfqpoint{0cm}{1.552cm}}
\pgfpathclose
\pgfusepath{clip}
\begin{pgfscope}
\begin{pgfscope}
\pgfsetdash{}{0cm}
\pgfsetlinewidth{0.818mm}
\pgfsetroundcap
\pgfsetroundjoin
\pgfsetmiterlimit{7.0}
\definecolor{eps2pgf_color}{gray}{0}\pgfsetstrokecolor{eps2pgf_color}\pgfsetfillcolor{eps2pgf_color}
\pgfpathmoveto{\pgfqpoint{0.117cm}{1.421cm}}
\pgfpathlineto{\pgfqpoint{0.682cm}{0.671cm}}
\pgfpathlineto{\pgfqpoint{1.246cm}{1.421cm}}
\pgfusepath{stroke}
\end{pgfscope}
\definecolor{eps2pgf_color}{gray}{0}\pgfsetstrokecolor{eps2pgf_color}\pgfsetfillcolor{eps2pgf_color}
\pgfpathmoveto{\pgfqpoint{0.273cm}{1.395cm}}
\pgfpathcurveto{\pgfqpoint{0.273cm}{1.432cm}}{\pgfqpoint{0.259cm}{1.467cm}}{\pgfqpoint{0.233cm}{1.492cm}}
\pgfpathcurveto{\pgfqpoint{0.207cm}{1.518cm}}{\pgfqpoint{0.173cm}{1.532cm}}{\pgfqpoint{0.137cm}{1.532cm}}
\pgfpathcurveto{\pgfqpoint{0.1cm}{1.532cm}}{\pgfqpoint{0.066cm}{1.518cm}}{\pgfqpoint{0.04cm}{1.492cm}}
\pgfpathcurveto{\pgfqpoint{0.014cm}{1.467cm}}{\pgfqpoint{0cm}{1.432cm}}{\pgfqpoint{0cm}{1.395cm}}
\pgfpathcurveto{\pgfqpoint{0cm}{1.359cm}}{\pgfqpoint{0.014cm}{1.324cm}}{\pgfqpoint{0.04cm}{1.299cm}}
\pgfpathcurveto{\pgfqpoint{0.066cm}{1.273cm}}{\pgfqpoint{0.1cm}{1.258cm}}{\pgfqpoint{0.137cm}{1.258cm}}
\pgfpathcurveto{\pgfqpoint{0.173cm}{1.258cm}}{\pgfqpoint{0.207cm}{1.273cm}}{\pgfqpoint{0.233cm}{1.299cm}}
\pgfpathcurveto{\pgfqpoint{0.259cm}{1.324cm}}{\pgfqpoint{0.273cm}{1.359cm}}{\pgfqpoint{0.273cm}{1.395cm}}
\pgfusepath{fill}
\begin{pgfscope}
\pgfsetdash{}{0cm}
\pgfsetlinewidth{0.818mm}
\pgfsetmiterlimit{7.0}
\pgfpathmoveto{\pgfqpoint{0.682cm}{0.671cm}}
\pgfpathlineto{\pgfqpoint{0.679cm}{1.418cm}}
\pgfusepath{stroke}
\end{pgfscope}
\pgfpathmoveto{\pgfqpoint{0.815cm}{1.399cm}}
\pgfpathcurveto{\pgfqpoint{0.815cm}{1.435cm}}{\pgfqpoint{0.801cm}{1.47cm}}{\pgfqpoint{0.775cm}{1.496cm}}
\pgfpathcurveto{\pgfqpoint{0.75cm}{1.521cm}}{\pgfqpoint{0.715cm}{1.536cm}}{\pgfqpoint{0.679cm}{1.536cm}}
\pgfpathcurveto{\pgfqpoint{0.643cm}{1.536cm}}{\pgfqpoint{0.608cm}{1.521cm}}{\pgfqpoint{0.582cm}{1.496cm}}
\pgfpathcurveto{\pgfqpoint{0.557cm}{1.47cm}}{\pgfqpoint{0.542cm}{1.435cm}}{\pgfqpoint{0.542cm}{1.399cm}}
\pgfpathcurveto{\pgfqpoint{0.542cm}{1.363cm}}{\pgfqpoint{0.557cm}{1.328cm}}{\pgfqpoint{0.582cm}{1.302cm}}
\pgfpathcurveto{\pgfqpoint{0.608cm}{1.276cm}}{\pgfqpoint{0.643cm}{1.262cm}}{\pgfqpoint{0.679cm}{1.262cm}}
\pgfpathcurveto{\pgfqpoint{0.715cm}{1.262cm}}{\pgfqpoint{0.75cm}{1.276cm}}{\pgfqpoint{0.775cm}{1.302cm}}
\pgfpathcurveto{\pgfqpoint{0.801cm}{1.328cm}}{\pgfqpoint{0.815cm}{1.363cm}}{\pgfqpoint{0.815cm}{1.399cm}}
\pgfusepath{fill}
\pgfpathmoveto{\pgfqpoint{1.345cm}{1.371cm}}
\pgfpathcurveto{\pgfqpoint{1.345cm}{1.408cm}}{\pgfqpoint{1.331cm}{1.442cm}}{\pgfqpoint{1.305cm}{1.468cm}}
\pgfpathcurveto{\pgfqpoint{1.28cm}{1.494cm}}{\pgfqpoint{1.245cm}{1.508cm}}{\pgfqpoint{1.209cm}{1.508cm}}
\pgfpathcurveto{\pgfqpoint{1.172cm}{1.508cm}}{\pgfqpoint{1.138cm}{1.494cm}}{\pgfqpoint{1.112cm}{1.468cm}}
\pgfpathcurveto{\pgfqpoint{1.087cm}{1.442cm}}{\pgfqpoint{1.072cm}{1.408cm}}{\pgfqpoint{1.072cm}{1.371cm}}
\pgfpathcurveto{\pgfqpoint{1.072cm}{1.335cm}}{\pgfqpoint{1.087cm}{1.3cm}}{\pgfqpoint{1.112cm}{1.274cm}}
\pgfpathcurveto{\pgfqpoint{1.138cm}{1.249cm}}{\pgfqpoint{1.172cm}{1.234cm}}{\pgfqpoint{1.209cm}{1.234cm}}
\pgfpathcurveto{\pgfqpoint{1.245cm}{1.234cm}}{\pgfqpoint{1.28cm}{1.249cm}}{\pgfqpoint{1.305cm}{1.274cm}}
\pgfpathcurveto{\pgfqpoint{1.331cm}{1.3cm}}{\pgfqpoint{1.345cm}{1.335cm}}{\pgfqpoint{1.345cm}{1.371cm}}
\pgfusepath{fill}
\begin{pgfscope}
\pgfsetdash{}{0cm}
\pgfsetlinewidth{0.818mm}
\pgfsetroundcap
\pgfsetmiterlimit{4.0}
\pgfpathmoveto{\pgfqpoint{0.682cm}{0.671cm}}
\pgfpathlineto{\pgfqpoint{0.682cm}{0.042cm}}
\pgfusepath{stroke}
\end{pgfscope}
\end{pgfscope}
\end{pgfscope}
\end{pgfscope}
\end{tikzpicture}}} \|_{C_T^{\beta / 2}
     L^{\infty, \varepsilon} (\rho^{\sigma})} + \lambda \| X_{M,
     \varepsilon}^{\!\resizebox{0.6em}{!}{
\begin{tikzpicture}
\pgfpathmoveto{\pgfqpoint{0cm}{-0.035cm}}
\pgfpathlineto{\pgfqpoint{1.376cm}{-0.035cm}}
\pgfpathlineto{\pgfqpoint{1.376cm}{1.552cm}}
\pgfpathlineto{\pgfqpoint{0cm}{1.552cm}}
\pgfpathclose
\pgfusepath{clip}
\begin{pgfscope}
\begin{pgfscope}
\pgfpathmoveto{\pgfqpoint{0cm}{-0.035cm}}
\pgfpathlineto{\pgfqpoint{1.376cm}{-0.035cm}}
\pgfpathlineto{\pgfqpoint{1.376cm}{1.552cm}}
\pgfpathlineto{\pgfqpoint{0cm}{1.552cm}}
\pgfpathclose
\pgfusepath{clip}
\begin{pgfscope}
\begin{pgfscope}
\pgfsetdash{}{0cm}
\pgfsetlinewidth{0.818mm}
\pgfsetroundcap
\pgfsetroundjoin
\pgfsetmiterlimit{7.0}
\definecolor{eps2pgf_color}{gray}{0}\pgfsetstrokecolor{eps2pgf_color}\pgfsetfillcolor{eps2pgf_color}
\pgfpathmoveto{\pgfqpoint{0.117cm}{1.421cm}}
\pgfpathlineto{\pgfqpoint{0.682cm}{0.671cm}}
\pgfpathlineto{\pgfqpoint{1.246cm}{1.421cm}}
\pgfusepath{stroke}
\end{pgfscope}
\definecolor{eps2pgf_color}{gray}{0}\pgfsetstrokecolor{eps2pgf_color}\pgfsetfillcolor{eps2pgf_color}
\pgfpathmoveto{\pgfqpoint{0.273cm}{1.395cm}}
\pgfpathcurveto{\pgfqpoint{0.273cm}{1.432cm}}{\pgfqpoint{0.259cm}{1.467cm}}{\pgfqpoint{0.233cm}{1.492cm}}
\pgfpathcurveto{\pgfqpoint{0.207cm}{1.518cm}}{\pgfqpoint{0.173cm}{1.532cm}}{\pgfqpoint{0.137cm}{1.532cm}}
\pgfpathcurveto{\pgfqpoint{0.1cm}{1.532cm}}{\pgfqpoint{0.066cm}{1.518cm}}{\pgfqpoint{0.04cm}{1.492cm}}
\pgfpathcurveto{\pgfqpoint{0.014cm}{1.467cm}}{\pgfqpoint{0cm}{1.432cm}}{\pgfqpoint{0cm}{1.395cm}}
\pgfpathcurveto{\pgfqpoint{0cm}{1.359cm}}{\pgfqpoint{0.014cm}{1.324cm}}{\pgfqpoint{0.04cm}{1.299cm}}
\pgfpathcurveto{\pgfqpoint{0.066cm}{1.273cm}}{\pgfqpoint{0.1cm}{1.258cm}}{\pgfqpoint{0.137cm}{1.258cm}}
\pgfpathcurveto{\pgfqpoint{0.173cm}{1.258cm}}{\pgfqpoint{0.207cm}{1.273cm}}{\pgfqpoint{0.233cm}{1.299cm}}
\pgfpathcurveto{\pgfqpoint{0.259cm}{1.324cm}}{\pgfqpoint{0.273cm}{1.359cm}}{\pgfqpoint{0.273cm}{1.395cm}}
\pgfusepath{fill}
\begin{pgfscope}
\pgfsetdash{}{0cm}
\pgfsetlinewidth{0.818mm}
\pgfsetmiterlimit{7.0}
\pgfpathmoveto{\pgfqpoint{0.682cm}{0.671cm}}
\pgfpathlineto{\pgfqpoint{0.679cm}{1.418cm}}
\pgfusepath{stroke}
\end{pgfscope}
\pgfpathmoveto{\pgfqpoint{0.815cm}{1.399cm}}
\pgfpathcurveto{\pgfqpoint{0.815cm}{1.435cm}}{\pgfqpoint{0.801cm}{1.47cm}}{\pgfqpoint{0.775cm}{1.496cm}}
\pgfpathcurveto{\pgfqpoint{0.75cm}{1.521cm}}{\pgfqpoint{0.715cm}{1.536cm}}{\pgfqpoint{0.679cm}{1.536cm}}
\pgfpathcurveto{\pgfqpoint{0.643cm}{1.536cm}}{\pgfqpoint{0.608cm}{1.521cm}}{\pgfqpoint{0.582cm}{1.496cm}}
\pgfpathcurveto{\pgfqpoint{0.557cm}{1.47cm}}{\pgfqpoint{0.542cm}{1.435cm}}{\pgfqpoint{0.542cm}{1.399cm}}
\pgfpathcurveto{\pgfqpoint{0.542cm}{1.363cm}}{\pgfqpoint{0.557cm}{1.328cm}}{\pgfqpoint{0.582cm}{1.302cm}}
\pgfpathcurveto{\pgfqpoint{0.608cm}{1.276cm}}{\pgfqpoint{0.643cm}{1.262cm}}{\pgfqpoint{0.679cm}{1.262cm}}
\pgfpathcurveto{\pgfqpoint{0.715cm}{1.262cm}}{\pgfqpoint{0.75cm}{1.276cm}}{\pgfqpoint{0.775cm}{1.302cm}}
\pgfpathcurveto{\pgfqpoint{0.801cm}{1.328cm}}{\pgfqpoint{0.815cm}{1.363cm}}{\pgfqpoint{0.815cm}{1.399cm}}
\pgfusepath{fill}
\pgfpathmoveto{\pgfqpoint{1.345cm}{1.371cm}}
\pgfpathcurveto{\pgfqpoint{1.345cm}{1.408cm}}{\pgfqpoint{1.331cm}{1.442cm}}{\pgfqpoint{1.305cm}{1.468cm}}
\pgfpathcurveto{\pgfqpoint{1.28cm}{1.494cm}}{\pgfqpoint{1.245cm}{1.508cm}}{\pgfqpoint{1.209cm}{1.508cm}}
\pgfpathcurveto{\pgfqpoint{1.172cm}{1.508cm}}{\pgfqpoint{1.138cm}{1.494cm}}{\pgfqpoint{1.112cm}{1.468cm}}
\pgfpathcurveto{\pgfqpoint{1.087cm}{1.442cm}}{\pgfqpoint{1.072cm}{1.408cm}}{\pgfqpoint{1.072cm}{1.371cm}}
\pgfpathcurveto{\pgfqpoint{1.072cm}{1.335cm}}{\pgfqpoint{1.087cm}{1.3cm}}{\pgfqpoint{1.112cm}{1.274cm}}
\pgfpathcurveto{\pgfqpoint{1.138cm}{1.249cm}}{\pgfqpoint{1.172cm}{1.234cm}}{\pgfqpoint{1.209cm}{1.234cm}}
\pgfpathcurveto{\pgfqpoint{1.245cm}{1.234cm}}{\pgfqpoint{1.28cm}{1.249cm}}{\pgfqpoint{1.305cm}{1.274cm}}
\pgfpathcurveto{\pgfqpoint{1.331cm}{1.3cm}}{\pgfqpoint{1.345cm}{1.335cm}}{\pgfqpoint{1.345cm}{1.371cm}}
\pgfusepath{fill}
\begin{pgfscope}
\pgfsetdash{}{0cm}
\pgfsetlinewidth{0.818mm}
\pgfsetroundcap
\pgfsetmiterlimit{4.0}
\pgfpathmoveto{\pgfqpoint{0.682cm}{0.671cm}}
\pgfpathlineto{\pgfqpoint{0.682cm}{0.042cm}}
\pgfusepath{stroke}
\end{pgfscope}
\end{pgfscope}
\end{pgfscope}
\end{pgfscope}
\end{tikzpicture}}} \|_{C_T \CC^{1 / 2 - \kappa, \varepsilon}
     (\rho^{\sigma})} . \]
  The proof is complete.
\end{proof}

According to this result, we remark that $Y_{M, \varepsilon}$ itself is not a
polynomial in the noise terms, but with our choice of localization it allows
for a polynomial bound of its norm.
As the next step, we introduce further stochastic objects needed below.
Namely,
\[ X_{M, \varepsilon}^{\!\resizebox{0.6em}{!}{
\begin{tikzpicture}
\pgfpathmoveto{\pgfqpoint{0cm}{0cm}}
\pgfpathlineto{\pgfqpoint{1.376cm}{0cm}}
\pgfpathlineto{\pgfqpoint{1.376cm}{1.588cm}}
\pgfpathlineto{\pgfqpoint{0cm}{1.588cm}}
\pgfpathclose
\pgfusepath{clip}
\begin{pgfscope}
\begin{pgfscope}
\pgfpathmoveto{\pgfqpoint{0cm}{0cm}}
\pgfpathlineto{\pgfqpoint{1.376cm}{0cm}}
\pgfpathlineto{\pgfqpoint{1.376cm}{1.588cm}}
\pgfpathlineto{\pgfqpoint{0cm}{1.588cm}}
\pgfpathclose
\pgfusepath{clip}
\begin{pgfscope}
\begin{pgfscope}
\definecolor{eps2pgf_color}{gray}{0.976471}\pgfsetstrokecolor{eps2pgf_color}\pgfsetfillcolor{eps2pgf_color}
\pgfpathmoveto{\pgfqpoint{0cm}{0cm}}
\pgfpathlineto{\pgfqpoint{1.376cm}{0cm}}
\pgfpathlineto{\pgfqpoint{1.376cm}{1.588cm}}
\pgfpathlineto{\pgfqpoint{0cm}{1.588cm}}
\pgfpathclose
\pgfusepath{fill}
\end{pgfscope}
\begin{pgfscope}
\pgfsetdash{}{0cm}
\pgfsetlinewidth{0.818mm}
\pgfsetroundcap
\pgfsetroundjoin
\pgfsetmiterlimit{7.0}
\definecolor{eps2pgf_color}{gray}{0}\pgfsetstrokecolor{eps2pgf_color}\pgfsetfillcolor{eps2pgf_color}
\pgfpathmoveto{\pgfqpoint{0.117cm}{1.476cm}}
\pgfpathlineto{\pgfqpoint{0.682cm}{0.726cm}}
\pgfpathlineto{\pgfqpoint{1.246cm}{1.476cm}}
\pgfusepath{stroke}
\end{pgfscope}
\definecolor{eps2pgf_color}{gray}{0}\pgfsetstrokecolor{eps2pgf_color}\pgfsetfillcolor{eps2pgf_color}
\pgfpathmoveto{\pgfqpoint{0.273cm}{1.451cm}}
\pgfpathcurveto{\pgfqpoint{0.273cm}{1.487cm}}{\pgfqpoint{0.259cm}{1.522cm}}{\pgfqpoint{0.233cm}{1.547cm}}
\pgfpathcurveto{\pgfqpoint{0.207cm}{1.573cm}}{\pgfqpoint{0.173cm}{1.588cm}}{\pgfqpoint{0.137cm}{1.588cm}}
\pgfpathcurveto{\pgfqpoint{0.1cm}{1.588cm}}{\pgfqpoint{0.066cm}{1.573cm}}{\pgfqpoint{0.04cm}{1.547cm}}
\pgfpathcurveto{\pgfqpoint{0.014cm}{1.522cm}}{\pgfqpoint{0cm}{1.487cm}}{\pgfqpoint{0cm}{1.451cm}}
\pgfpathcurveto{\pgfqpoint{0cm}{1.414cm}}{\pgfqpoint{0.014cm}{1.379cm}}{\pgfqpoint{0.04cm}{1.354cm}}
\pgfpathcurveto{\pgfqpoint{0.066cm}{1.328cm}}{\pgfqpoint{0.1cm}{1.314cm}}{\pgfqpoint{0.137cm}{1.314cm}}
\pgfpathcurveto{\pgfqpoint{0.173cm}{1.314cm}}{\pgfqpoint{0.207cm}{1.328cm}}{\pgfqpoint{0.233cm}{1.354cm}}
\pgfpathcurveto{\pgfqpoint{0.259cm}{1.379cm}}{\pgfqpoint{0.273cm}{1.414cm}}{\pgfqpoint{0.273cm}{1.451cm}}
\pgfusepath{fill}
\pgfpathmoveto{\pgfqpoint{1.345cm}{1.426cm}}
\pgfpathcurveto{\pgfqpoint{1.345cm}{1.463cm}}{\pgfqpoint{1.331cm}{1.497cm}}{\pgfqpoint{1.305cm}{1.523cm}}
\pgfpathcurveto{\pgfqpoint{1.28cm}{1.549cm}}{\pgfqpoint{1.245cm}{1.563cm}}{\pgfqpoint{1.209cm}{1.563cm}}
\pgfpathcurveto{\pgfqpoint{1.172cm}{1.563cm}}{\pgfqpoint{1.138cm}{1.549cm}}{\pgfqpoint{1.112cm}{1.523cm}}
\pgfpathcurveto{\pgfqpoint{1.087cm}{1.497cm}}{\pgfqpoint{1.072cm}{1.463cm}}{\pgfqpoint{1.072cm}{1.426cm}}
\pgfpathcurveto{\pgfqpoint{1.072cm}{1.39cm}}{\pgfqpoint{1.087cm}{1.355cm}}{\pgfqpoint{1.112cm}{1.329cm}}
\pgfpathcurveto{\pgfqpoint{1.138cm}{1.304cm}}{\pgfqpoint{1.172cm}{1.289cm}}{\pgfqpoint{1.209cm}{1.289cm}}
\pgfpathcurveto{\pgfqpoint{1.245cm}{1.289cm}}{\pgfqpoint{1.28cm}{1.304cm}}{\pgfqpoint{1.305cm}{1.329cm}}
\pgfpathcurveto{\pgfqpoint{1.331cm}{1.355cm}}{\pgfqpoint{1.345cm}{1.39cm}}{\pgfqpoint{1.345cm}{1.426cm}}
\pgfusepath{fill}
\begin{pgfscope}
\pgfsetdash{}{0cm}
\pgfsetlinewidth{0.818mm}
\pgfsetroundcap
\pgfsetmiterlimit{4.0}
\pgfpathmoveto{\pgfqpoint{0.682cm}{0.726cm}}
\pgfpathlineto{\pgfqpoint{0.682cm}{0.097cm}}
\pgfusepath{stroke}
\end{pgfscope}
\end{pgfscope}
\end{pgfscope}
\end{pgfscope}
\end{tikzpicture}}} \assign \LL_{\varepsilon}^{- 1} \llbracket
   X_{M, \varepsilon}^2 \rrbracket, \qquad X_{M, \varepsilon}^{\!\resizebox{!}{.8em}{
\begin{tikzpicture}
\pgfpathmoveto{\pgfqpoint{0cm}{-0.035cm}}
\pgfpathlineto{\pgfqpoint{1.976cm}{-0.035cm}}
\pgfpathlineto{\pgfqpoint{1.976cm}{1.94cm}}
\pgfpathlineto{\pgfqpoint{0cm}{1.94cm}}
\pgfpathclose
\pgfusepath{clip}
\begin{pgfscope}
\begin{pgfscope}
\pgfpathmoveto{\pgfqpoint{0cm}{-0.035cm}}
\pgfpathlineto{\pgfqpoint{1.976cm}{-0.035cm}}
\pgfpathlineto{\pgfqpoint{1.976cm}{1.94cm}}
\pgfpathlineto{\pgfqpoint{0cm}{1.94cm}}
\pgfpathclose
\pgfusepath{clip}
\begin{pgfscope}
\begin{pgfscope}
\pgfsetdash{}{0cm}
\pgfsetlinewidth{0.818mm}
\pgfsetroundcap
\pgfsetroundjoin
\pgfsetmiterlimit{7.0}
\definecolor{eps2pgf_color}{gray}{0}\pgfsetstrokecolor{eps2pgf_color}\pgfsetfillcolor{eps2pgf_color}
\pgfpathmoveto{\pgfqpoint{0.117cm}{1.815cm}}
\pgfpathlineto{\pgfqpoint{0.682cm}{1.065cm}}
\pgfpathlineto{\pgfqpoint{1.246cm}{1.815cm}}
\pgfusepath{stroke}
\end{pgfscope}
\definecolor{eps2pgf_color}{gray}{0}\pgfsetstrokecolor{eps2pgf_color}\pgfsetfillcolor{eps2pgf_color}
\pgfpathmoveto{\pgfqpoint{0.273cm}{1.789cm}}
\pgfpathcurveto{\pgfqpoint{0.273cm}{1.825cm}}{\pgfqpoint{0.259cm}{1.86cm}}{\pgfqpoint{0.233cm}{1.886cm}}
\pgfpathcurveto{\pgfqpoint{0.207cm}{1.912cm}}{\pgfqpoint{0.173cm}{1.926cm}}{\pgfqpoint{0.137cm}{1.926cm}}
\pgfpathcurveto{\pgfqpoint{0.1cm}{1.926cm}}{\pgfqpoint{0.066cm}{1.912cm}}{\pgfqpoint{0.04cm}{1.886cm}}
\pgfpathcurveto{\pgfqpoint{0.014cm}{1.86cm}}{\pgfqpoint{0cm}{1.825cm}}{\pgfqpoint{0cm}{1.789cm}}
\pgfpathcurveto{\pgfqpoint{0cm}{1.753cm}}{\pgfqpoint{0.014cm}{1.718cm}}{\pgfqpoint{0.04cm}{1.692cm}}
\pgfpathcurveto{\pgfqpoint{0.066cm}{1.667cm}}{\pgfqpoint{0.1cm}{1.652cm}}{\pgfqpoint{0.137cm}{1.652cm}}
\pgfpathcurveto{\pgfqpoint{0.173cm}{1.652cm}}{\pgfqpoint{0.207cm}{1.667cm}}{\pgfqpoint{0.233cm}{1.692cm}}
\pgfpathcurveto{\pgfqpoint{0.259cm}{1.718cm}}{\pgfqpoint{0.273cm}{1.753cm}}{\pgfqpoint{0.273cm}{1.789cm}}
\pgfusepath{fill}
\begin{pgfscope}
\pgfsetdash{}{0cm}
\pgfsetlinewidth{0.818mm}
\pgfsetmiterlimit{7.0}
\pgfpathmoveto{\pgfqpoint{0.682cm}{1.065cm}}
\pgfpathlineto{\pgfqpoint{0.679cm}{1.812cm}}
\pgfusepath{stroke}
\end{pgfscope}
\pgfpathmoveto{\pgfqpoint{0.815cm}{1.793cm}}
\pgfpathcurveto{\pgfqpoint{0.815cm}{1.829cm}}{\pgfqpoint{0.801cm}{1.864cm}}{\pgfqpoint{0.775cm}{1.89cm}}
\pgfpathcurveto{\pgfqpoint{0.75cm}{1.915cm}}{\pgfqpoint{0.715cm}{1.93cm}}{\pgfqpoint{0.679cm}{1.93cm}}
\pgfpathcurveto{\pgfqpoint{0.643cm}{1.93cm}}{\pgfqpoint{0.608cm}{1.915cm}}{\pgfqpoint{0.582cm}{1.89cm}}
\pgfpathcurveto{\pgfqpoint{0.557cm}{1.864cm}}{\pgfqpoint{0.542cm}{1.829cm}}{\pgfqpoint{0.542cm}{1.793cm}}
\pgfpathcurveto{\pgfqpoint{0.542cm}{1.756cm}}{\pgfqpoint{0.557cm}{1.722cm}}{\pgfqpoint{0.582cm}{1.696cm}}
\pgfpathcurveto{\pgfqpoint{0.608cm}{1.67cm}}{\pgfqpoint{0.643cm}{1.656cm}}{\pgfqpoint{0.679cm}{1.656cm}}
\pgfpathcurveto{\pgfqpoint{0.715cm}{1.656cm}}{\pgfqpoint{0.75cm}{1.67cm}}{\pgfqpoint{0.775cm}{1.696cm}}
\pgfpathcurveto{\pgfqpoint{0.801cm}{1.722cm}}{\pgfqpoint{0.815cm}{1.756cm}}{\pgfqpoint{0.815cm}{1.793cm}}
\pgfusepath{fill}
\pgfpathmoveto{\pgfqpoint{1.345cm}{1.765cm}}
\pgfpathcurveto{\pgfqpoint{1.345cm}{1.801cm}}{\pgfqpoint{1.331cm}{1.836cm}}{\pgfqpoint{1.305cm}{1.862cm}}
\pgfpathcurveto{\pgfqpoint{1.28cm}{1.887cm}}{\pgfqpoint{1.245cm}{1.902cm}}{\pgfqpoint{1.209cm}{1.902cm}}
\pgfpathcurveto{\pgfqpoint{1.172cm}{1.902cm}}{\pgfqpoint{1.138cm}{1.887cm}}{\pgfqpoint{1.112cm}{1.862cm}}
\pgfpathcurveto{\pgfqpoint{1.087cm}{1.836cm}}{\pgfqpoint{1.072cm}{1.801cm}}{\pgfqpoint{1.072cm}{1.765cm}}
\pgfpathcurveto{\pgfqpoint{1.072cm}{1.728cm}}{\pgfqpoint{1.087cm}{1.694cm}}{\pgfqpoint{1.112cm}{1.668cm}}
\pgfpathcurveto{\pgfqpoint{1.138cm}{1.642cm}}{\pgfqpoint{1.172cm}{1.628cm}}{\pgfqpoint{1.209cm}{1.628cm}}
\pgfpathcurveto{\pgfqpoint{1.245cm}{1.628cm}}{\pgfqpoint{1.28cm}{1.642cm}}{\pgfqpoint{1.305cm}{1.668cm}}
\pgfpathcurveto{\pgfqpoint{1.331cm}{1.694cm}}{\pgfqpoint{1.345cm}{1.728cm}}{\pgfqpoint{1.345cm}{1.765cm}}
\pgfusepath{fill}
\begin{pgfscope}
\pgfsetdash{}{0cm}
\pgfsetlinewidth{0.818mm}
\pgfsetroundcap
\pgfsetroundjoin
\pgfsetmiterlimit{7.0}
\pgfpathmoveto{\pgfqpoint{0.682cm}{1.065cm}}
\pgfpathlineto{\pgfqpoint{1.246cm}{0.315cm}}
\pgfpathlineto{\pgfqpoint{1.811cm}{1.065cm}}
\pgfusepath{stroke}
\end{pgfscope}
\pgfpathmoveto{\pgfqpoint{1.948cm}{1.065cm}}
\pgfpathcurveto{\pgfqpoint{1.948cm}{1.101cm}}{\pgfqpoint{1.933cm}{1.136cm}}{\pgfqpoint{1.907cm}{1.162cm}}
\pgfpathcurveto{\pgfqpoint{1.882cm}{1.187cm}}{\pgfqpoint{1.847cm}{1.202cm}}{\pgfqpoint{1.811cm}{1.202cm}}
\pgfpathcurveto{\pgfqpoint{1.775cm}{1.202cm}}{\pgfqpoint{1.74cm}{1.187cm}}{\pgfqpoint{1.714cm}{1.162cm}}
\pgfpathcurveto{\pgfqpoint{1.689cm}{1.136cm}}{\pgfqpoint{1.674cm}{1.101cm}}{\pgfqpoint{1.674cm}{1.065cm}}
\pgfpathcurveto{\pgfqpoint{1.674cm}{1.029cm}}{\pgfqpoint{1.689cm}{0.994cm}}{\pgfqpoint{1.714cm}{0.968cm}}
\pgfpathcurveto{\pgfqpoint{1.74cm}{0.942cm}}{\pgfqpoint{1.775cm}{0.928cm}}{\pgfqpoint{1.811cm}{0.928cm}}
\pgfpathcurveto{\pgfqpoint{1.847cm}{0.928cm}}{\pgfqpoint{1.882cm}{0.942cm}}{\pgfqpoint{1.907cm}{0.968cm}}
\pgfpathcurveto{\pgfqpoint{1.933cm}{0.994cm}}{\pgfqpoint{1.948cm}{1.029cm}}{\pgfqpoint{1.948cm}{1.065cm}}
\pgfusepath{fill}
\begin{pgfscope}
\pgfsetdash{}{0cm}
\pgfsetlinewidth{0.818mm}
\pgfsetmiterlimit{4.0}
\pgfpathmoveto{\pgfqpoint{1.383cm}{0.178cm}}
\pgfpathcurveto{\pgfqpoint{1.383cm}{0.214cm}}{\pgfqpoint{1.369cm}{0.249cm}}{\pgfqpoint{1.343cm}{0.275cm}}
\pgfpathcurveto{\pgfqpoint{1.317cm}{0.3cm}}{\pgfqpoint{1.283cm}{0.315cm}}{\pgfqpoint{1.246cm}{0.315cm}}
\pgfpathcurveto{\pgfqpoint{1.21cm}{0.315cm}}{\pgfqpoint{1.175cm}{0.3cm}}{\pgfqpoint{1.15cm}{0.275cm}}
\pgfpathcurveto{\pgfqpoint{1.124cm}{0.249cm}}{\pgfqpoint{1.11cm}{0.214cm}}{\pgfqpoint{1.11cm}{0.178cm}}
\pgfpathcurveto{\pgfqpoint{1.11cm}{0.141cm}}{\pgfqpoint{1.124cm}{0.107cm}}{\pgfqpoint{1.15cm}{0.081cm}}
\pgfpathcurveto{\pgfqpoint{1.175cm}{0.055cm}}{\pgfqpoint{1.21cm}{0.041cm}}{\pgfqpoint{1.246cm}{0.041cm}}
\pgfpathcurveto{\pgfqpoint{1.283cm}{0.041cm}}{\pgfqpoint{1.317cm}{0.055cm}}{\pgfqpoint{1.343cm}{0.081cm}}
\pgfpathcurveto{\pgfqpoint{1.369cm}{0.107cm}}{\pgfqpoint{1.383cm}{0.141cm}}{\pgfqpoint{1.383cm}{0.178cm}}
\pgfusepath{stroke}
\end{pgfscope}
\end{pgfscope}
\end{pgfscope}
\end{pgfscope}
\end{tikzpicture}}} =
   X_{M, \varepsilon} \circ X_{M, \varepsilon}^{\!\resizebox{0.6em}{!}{
\begin{tikzpicture}
\pgfpathmoveto{\pgfqpoint{0cm}{-0.035cm}}
\pgfpathlineto{\pgfqpoint{1.376cm}{-0.035cm}}
\pgfpathlineto{\pgfqpoint{1.376cm}{1.552cm}}
\pgfpathlineto{\pgfqpoint{0cm}{1.552cm}}
\pgfpathclose
\pgfusepath{clip}
\begin{pgfscope}
\begin{pgfscope}
\pgfpathmoveto{\pgfqpoint{0cm}{-0.035cm}}
\pgfpathlineto{\pgfqpoint{1.376cm}{-0.035cm}}
\pgfpathlineto{\pgfqpoint{1.376cm}{1.552cm}}
\pgfpathlineto{\pgfqpoint{0cm}{1.552cm}}
\pgfpathclose
\pgfusepath{clip}
\begin{pgfscope}
\begin{pgfscope}
\pgfsetdash{}{0cm}
\pgfsetlinewidth{0.818mm}
\pgfsetroundcap
\pgfsetroundjoin
\pgfsetmiterlimit{7.0}
\definecolor{eps2pgf_color}{gray}{0}\pgfsetstrokecolor{eps2pgf_color}\pgfsetfillcolor{eps2pgf_color}
\pgfpathmoveto{\pgfqpoint{0.117cm}{1.421cm}}
\pgfpathlineto{\pgfqpoint{0.682cm}{0.671cm}}
\pgfpathlineto{\pgfqpoint{1.246cm}{1.421cm}}
\pgfusepath{stroke}
\end{pgfscope}
\definecolor{eps2pgf_color}{gray}{0}\pgfsetstrokecolor{eps2pgf_color}\pgfsetfillcolor{eps2pgf_color}
\pgfpathmoveto{\pgfqpoint{0.273cm}{1.395cm}}
\pgfpathcurveto{\pgfqpoint{0.273cm}{1.432cm}}{\pgfqpoint{0.259cm}{1.467cm}}{\pgfqpoint{0.233cm}{1.492cm}}
\pgfpathcurveto{\pgfqpoint{0.207cm}{1.518cm}}{\pgfqpoint{0.173cm}{1.532cm}}{\pgfqpoint{0.137cm}{1.532cm}}
\pgfpathcurveto{\pgfqpoint{0.1cm}{1.532cm}}{\pgfqpoint{0.066cm}{1.518cm}}{\pgfqpoint{0.04cm}{1.492cm}}
\pgfpathcurveto{\pgfqpoint{0.014cm}{1.467cm}}{\pgfqpoint{0cm}{1.432cm}}{\pgfqpoint{0cm}{1.395cm}}
\pgfpathcurveto{\pgfqpoint{0cm}{1.359cm}}{\pgfqpoint{0.014cm}{1.324cm}}{\pgfqpoint{0.04cm}{1.299cm}}
\pgfpathcurveto{\pgfqpoint{0.066cm}{1.273cm}}{\pgfqpoint{0.1cm}{1.258cm}}{\pgfqpoint{0.137cm}{1.258cm}}
\pgfpathcurveto{\pgfqpoint{0.173cm}{1.258cm}}{\pgfqpoint{0.207cm}{1.273cm}}{\pgfqpoint{0.233cm}{1.299cm}}
\pgfpathcurveto{\pgfqpoint{0.259cm}{1.324cm}}{\pgfqpoint{0.273cm}{1.359cm}}{\pgfqpoint{0.273cm}{1.395cm}}
\pgfusepath{fill}
\begin{pgfscope}
\pgfsetdash{}{0cm}
\pgfsetlinewidth{0.818mm}
\pgfsetmiterlimit{7.0}
\pgfpathmoveto{\pgfqpoint{0.682cm}{0.671cm}}
\pgfpathlineto{\pgfqpoint{0.679cm}{1.418cm}}
\pgfusepath{stroke}
\end{pgfscope}
\pgfpathmoveto{\pgfqpoint{0.815cm}{1.399cm}}
\pgfpathcurveto{\pgfqpoint{0.815cm}{1.435cm}}{\pgfqpoint{0.801cm}{1.47cm}}{\pgfqpoint{0.775cm}{1.496cm}}
\pgfpathcurveto{\pgfqpoint{0.75cm}{1.521cm}}{\pgfqpoint{0.715cm}{1.536cm}}{\pgfqpoint{0.679cm}{1.536cm}}
\pgfpathcurveto{\pgfqpoint{0.643cm}{1.536cm}}{\pgfqpoint{0.608cm}{1.521cm}}{\pgfqpoint{0.582cm}{1.496cm}}
\pgfpathcurveto{\pgfqpoint{0.557cm}{1.47cm}}{\pgfqpoint{0.542cm}{1.435cm}}{\pgfqpoint{0.542cm}{1.399cm}}
\pgfpathcurveto{\pgfqpoint{0.542cm}{1.363cm}}{\pgfqpoint{0.557cm}{1.328cm}}{\pgfqpoint{0.582cm}{1.302cm}}
\pgfpathcurveto{\pgfqpoint{0.608cm}{1.276cm}}{\pgfqpoint{0.643cm}{1.262cm}}{\pgfqpoint{0.679cm}{1.262cm}}
\pgfpathcurveto{\pgfqpoint{0.715cm}{1.262cm}}{\pgfqpoint{0.75cm}{1.276cm}}{\pgfqpoint{0.775cm}{1.302cm}}
\pgfpathcurveto{\pgfqpoint{0.801cm}{1.328cm}}{\pgfqpoint{0.815cm}{1.363cm}}{\pgfqpoint{0.815cm}{1.399cm}}
\pgfusepath{fill}
\pgfpathmoveto{\pgfqpoint{1.345cm}{1.371cm}}
\pgfpathcurveto{\pgfqpoint{1.345cm}{1.408cm}}{\pgfqpoint{1.331cm}{1.442cm}}{\pgfqpoint{1.305cm}{1.468cm}}
\pgfpathcurveto{\pgfqpoint{1.28cm}{1.494cm}}{\pgfqpoint{1.245cm}{1.508cm}}{\pgfqpoint{1.209cm}{1.508cm}}
\pgfpathcurveto{\pgfqpoint{1.172cm}{1.508cm}}{\pgfqpoint{1.138cm}{1.494cm}}{\pgfqpoint{1.112cm}{1.468cm}}
\pgfpathcurveto{\pgfqpoint{1.087cm}{1.442cm}}{\pgfqpoint{1.072cm}{1.408cm}}{\pgfqpoint{1.072cm}{1.371cm}}
\pgfpathcurveto{\pgfqpoint{1.072cm}{1.335cm}}{\pgfqpoint{1.087cm}{1.3cm}}{\pgfqpoint{1.112cm}{1.274cm}}
\pgfpathcurveto{\pgfqpoint{1.138cm}{1.249cm}}{\pgfqpoint{1.172cm}{1.234cm}}{\pgfqpoint{1.209cm}{1.234cm}}
\pgfpathcurveto{\pgfqpoint{1.245cm}{1.234cm}}{\pgfqpoint{1.28cm}{1.249cm}}{\pgfqpoint{1.305cm}{1.274cm}}
\pgfpathcurveto{\pgfqpoint{1.331cm}{1.3cm}}{\pgfqpoint{1.345cm}{1.335cm}}{\pgfqpoint{1.345cm}{1.371cm}}
\pgfusepath{fill}
\begin{pgfscope}
\pgfsetdash{}{0cm}
\pgfsetlinewidth{0.818mm}
\pgfsetroundcap
\pgfsetmiterlimit{4.0}
\pgfpathmoveto{\pgfqpoint{0.682cm}{0.671cm}}
\pgfpathlineto{\pgfqpoint{0.682cm}{0.042cm}}
\pgfusepath{stroke}
\end{pgfscope}
\end{pgfscope}
\end{pgfscope}
\end{pgfscope}
\end{tikzpicture}}}, \]
\[ X_{M, \varepsilon}^{\!\resizebox{!}{.8em}{
\begin{tikzpicture}
\pgfpathmoveto{\pgfqpoint{0cm}{-0.035cm}}
\pgfpathlineto{\pgfqpoint{1.976cm}{-0.035cm}}
\pgfpathlineto{\pgfqpoint{1.976cm}{1.94cm}}
\pgfpathlineto{\pgfqpoint{0cm}{1.94cm}}
\pgfpathclose
\pgfusepath{clip}
\begin{pgfscope}
\begin{pgfscope}
\pgfpathmoveto{\pgfqpoint{0cm}{-0.035cm}}
\pgfpathlineto{\pgfqpoint{1.976cm}{-0.035cm}}
\pgfpathlineto{\pgfqpoint{1.976cm}{1.94cm}}
\pgfpathlineto{\pgfqpoint{0cm}{1.94cm}}
\pgfpathclose
\pgfusepath{clip}
\begin{pgfscope}
\begin{pgfscope}
\pgfsetdash{}{0cm}
\pgfsetlinewidth{0.818mm}
\pgfsetroundcap
\pgfsetroundjoin
\pgfsetmiterlimit{7.0}
\definecolor{eps2pgf_color}{gray}{0}\pgfsetstrokecolor{eps2pgf_color}\pgfsetfillcolor{eps2pgf_color}
\pgfpathmoveto{\pgfqpoint{0.117cm}{1.815cm}}
\pgfpathlineto{\pgfqpoint{0.682cm}{1.065cm}}
\pgfpathlineto{\pgfqpoint{1.246cm}{1.815cm}}
\pgfusepath{stroke}
\end{pgfscope}
\definecolor{eps2pgf_color}{gray}{0}\pgfsetstrokecolor{eps2pgf_color}\pgfsetfillcolor{eps2pgf_color}
\pgfpathmoveto{\pgfqpoint{0.273cm}{1.789cm}}
\pgfpathcurveto{\pgfqpoint{0.273cm}{1.825cm}}{\pgfqpoint{0.259cm}{1.86cm}}{\pgfqpoint{0.233cm}{1.886cm}}
\pgfpathcurveto{\pgfqpoint{0.207cm}{1.912cm}}{\pgfqpoint{0.173cm}{1.926cm}}{\pgfqpoint{0.137cm}{1.926cm}}
\pgfpathcurveto{\pgfqpoint{0.1cm}{1.926cm}}{\pgfqpoint{0.066cm}{1.912cm}}{\pgfqpoint{0.04cm}{1.886cm}}
\pgfpathcurveto{\pgfqpoint{0.014cm}{1.86cm}}{\pgfqpoint{0cm}{1.825cm}}{\pgfqpoint{0cm}{1.789cm}}
\pgfpathcurveto{\pgfqpoint{0cm}{1.753cm}}{\pgfqpoint{0.014cm}{1.718cm}}{\pgfqpoint{0.04cm}{1.692cm}}
\pgfpathcurveto{\pgfqpoint{0.066cm}{1.667cm}}{\pgfqpoint{0.1cm}{1.652cm}}{\pgfqpoint{0.137cm}{1.652cm}}
\pgfpathcurveto{\pgfqpoint{0.173cm}{1.652cm}}{\pgfqpoint{0.207cm}{1.667cm}}{\pgfqpoint{0.233cm}{1.692cm}}
\pgfpathcurveto{\pgfqpoint{0.259cm}{1.718cm}}{\pgfqpoint{0.273cm}{1.753cm}}{\pgfqpoint{0.273cm}{1.789cm}}
\pgfusepath{fill}
\pgfpathmoveto{\pgfqpoint{1.345cm}{1.765cm}}
\pgfpathcurveto{\pgfqpoint{1.345cm}{1.801cm}}{\pgfqpoint{1.331cm}{1.836cm}}{\pgfqpoint{1.305cm}{1.862cm}}
\pgfpathcurveto{\pgfqpoint{1.28cm}{1.887cm}}{\pgfqpoint{1.245cm}{1.902cm}}{\pgfqpoint{1.209cm}{1.902cm}}
\pgfpathcurveto{\pgfqpoint{1.172cm}{1.902cm}}{\pgfqpoint{1.138cm}{1.887cm}}{\pgfqpoint{1.112cm}{1.862cm}}
\pgfpathcurveto{\pgfqpoint{1.087cm}{1.836cm}}{\pgfqpoint{1.072cm}{1.801cm}}{\pgfqpoint{1.072cm}{1.765cm}}
\pgfpathcurveto{\pgfqpoint{1.072cm}{1.728cm}}{\pgfqpoint{1.087cm}{1.694cm}}{\pgfqpoint{1.112cm}{1.668cm}}
\pgfpathcurveto{\pgfqpoint{1.138cm}{1.642cm}}{\pgfqpoint{1.172cm}{1.628cm}}{\pgfqpoint{1.209cm}{1.628cm}}
\pgfpathcurveto{\pgfqpoint{1.245cm}{1.628cm}}{\pgfqpoint{1.28cm}{1.642cm}}{\pgfqpoint{1.305cm}{1.668cm}}
\pgfpathcurveto{\pgfqpoint{1.331cm}{1.694cm}}{\pgfqpoint{1.345cm}{1.728cm}}{\pgfqpoint{1.345cm}{1.765cm}}
\pgfusepath{fill}
\begin{pgfscope}
\pgfsetdash{}{0cm}
\pgfsetlinewidth{0.818mm}
\pgfsetroundcap
\pgfsetroundjoin
\pgfsetmiterlimit{7.0}
\pgfpathmoveto{\pgfqpoint{0.682cm}{1.065cm}}
\pgfpathlineto{\pgfqpoint{1.246cm}{0.315cm}}
\pgfpathlineto{\pgfqpoint{1.811cm}{1.065cm}}
\pgfusepath{stroke}
\end{pgfscope}
\pgfpathmoveto{\pgfqpoint{1.948cm}{1.065cm}}
\pgfpathcurveto{\pgfqpoint{1.948cm}{1.101cm}}{\pgfqpoint{1.933cm}{1.136cm}}{\pgfqpoint{1.907cm}{1.162cm}}
\pgfpathcurveto{\pgfqpoint{1.882cm}{1.187cm}}{\pgfqpoint{1.847cm}{1.202cm}}{\pgfqpoint{1.811cm}{1.202cm}}
\pgfpathcurveto{\pgfqpoint{1.775cm}{1.202cm}}{\pgfqpoint{1.74cm}{1.187cm}}{\pgfqpoint{1.714cm}{1.162cm}}
\pgfpathcurveto{\pgfqpoint{1.689cm}{1.136cm}}{\pgfqpoint{1.674cm}{1.101cm}}{\pgfqpoint{1.674cm}{1.065cm}}
\pgfpathcurveto{\pgfqpoint{1.674cm}{1.029cm}}{\pgfqpoint{1.689cm}{0.994cm}}{\pgfqpoint{1.714cm}{0.968cm}}
\pgfpathcurveto{\pgfqpoint{1.74cm}{0.942cm}}{\pgfqpoint{1.775cm}{0.928cm}}{\pgfqpoint{1.811cm}{0.928cm}}
\pgfpathcurveto{\pgfqpoint{1.847cm}{0.928cm}}{\pgfqpoint{1.882cm}{0.942cm}}{\pgfqpoint{1.907cm}{0.968cm}}
\pgfpathcurveto{\pgfqpoint{1.933cm}{0.994cm}}{\pgfqpoint{1.948cm}{1.029cm}}{\pgfqpoint{1.948cm}{1.065cm}}
\pgfusepath{fill}
\begin{pgfscope}
\pgfsetdash{}{0cm}
\pgfsetlinewidth{0.818mm}
\pgfsetmiterlimit{7.0}
\pgfpathmoveto{\pgfqpoint{1.246cm}{0.315cm}}
\pgfpathlineto{\pgfqpoint{1.244cm}{1.061cm}}
\pgfusepath{stroke}
\end{pgfscope}
\pgfpathmoveto{\pgfqpoint{1.38cm}{1.065cm}}
\pgfpathcurveto{\pgfqpoint{1.38cm}{1.101cm}}{\pgfqpoint{1.366cm}{1.136cm}}{\pgfqpoint{1.34cm}{1.162cm}}
\pgfpathcurveto{\pgfqpoint{1.315cm}{1.187cm}}{\pgfqpoint{1.28cm}{1.202cm}}{\pgfqpoint{1.244cm}{1.202cm}}
\pgfpathcurveto{\pgfqpoint{1.207cm}{1.202cm}}{\pgfqpoint{1.173cm}{1.187cm}}{\pgfqpoint{1.147cm}{1.162cm}}
\pgfpathcurveto{\pgfqpoint{1.121cm}{1.136cm}}{\pgfqpoint{1.107cm}{1.101cm}}{\pgfqpoint{1.107cm}{1.065cm}}
\pgfpathcurveto{\pgfqpoint{1.107cm}{1.029cm}}{\pgfqpoint{1.121cm}{0.994cm}}{\pgfqpoint{1.147cm}{0.968cm}}
\pgfpathcurveto{\pgfqpoint{1.173cm}{0.942cm}}{\pgfqpoint{1.207cm}{0.928cm}}{\pgfqpoint{1.244cm}{0.928cm}}
\pgfpathcurveto{\pgfqpoint{1.28cm}{0.928cm}}{\pgfqpoint{1.315cm}{0.942cm}}{\pgfqpoint{1.34cm}{0.968cm}}
\pgfpathcurveto{\pgfqpoint{1.366cm}{0.994cm}}{\pgfqpoint{1.38cm}{1.029cm}}{\pgfqpoint{1.38cm}{1.065cm}}
\pgfusepath{fill}
\begin{pgfscope}
\pgfsetdash{}{0cm}
\pgfsetlinewidth{0.818mm}
\pgfsetmiterlimit{4.0}
\pgfpathmoveto{\pgfqpoint{1.383cm}{0.178cm}}
\pgfpathcurveto{\pgfqpoint{1.383cm}{0.214cm}}{\pgfqpoint{1.369cm}{0.249cm}}{\pgfqpoint{1.343cm}{0.275cm}}
\pgfpathcurveto{\pgfqpoint{1.317cm}{0.3cm}}{\pgfqpoint{1.283cm}{0.315cm}}{\pgfqpoint{1.246cm}{0.315cm}}
\pgfpathcurveto{\pgfqpoint{1.21cm}{0.315cm}}{\pgfqpoint{1.175cm}{0.3cm}}{\pgfqpoint{1.15cm}{0.275cm}}
\pgfpathcurveto{\pgfqpoint{1.124cm}{0.249cm}}{\pgfqpoint{1.11cm}{0.214cm}}{\pgfqpoint{1.11cm}{0.178cm}}
\pgfpathcurveto{\pgfqpoint{1.11cm}{0.141cm}}{\pgfqpoint{1.124cm}{0.107cm}}{\pgfqpoint{1.15cm}{0.081cm}}
\pgfpathcurveto{\pgfqpoint{1.175cm}{0.055cm}}{\pgfqpoint{1.21cm}{0.041cm}}{\pgfqpoint{1.246cm}{0.041cm}}
\pgfpathcurveto{\pgfqpoint{1.283cm}{0.041cm}}{\pgfqpoint{1.317cm}{0.055cm}}{\pgfqpoint{1.343cm}{0.081cm}}
\pgfpathcurveto{\pgfqpoint{1.369cm}{0.107cm}}{\pgfqpoint{1.383cm}{0.141cm}}{\pgfqpoint{1.383cm}{0.178cm}}
\pgfusepath{stroke}
\end{pgfscope}
\end{pgfscope}
\end{pgfscope}
\end{pgfscope}
\end{tikzpicture}}} \assign 9 \llbracket X_{M, \varepsilon}^2
   \rrbracket \circ \Q_{\varepsilon}^{- 1} \llbracket X_{M, \varepsilon}^2
   \rrbracket - 3 b_{M, \varepsilon}, \]
\[ \tilde{X}_{M, \varepsilon}^{\!\resizebox{!}{.8em}{
\begin{tikzpicture}
\pgfpathmoveto{\pgfqpoint{0cm}{-0.035cm}}
\pgfpathlineto{\pgfqpoint{1.976cm}{-0.035cm}}
\pgfpathlineto{\pgfqpoint{1.976cm}{1.94cm}}
\pgfpathlineto{\pgfqpoint{0cm}{1.94cm}}
\pgfpathclose
\pgfusepath{clip}
\begin{pgfscope}
\begin{pgfscope}
\pgfpathmoveto{\pgfqpoint{0cm}{-0.035cm}}
\pgfpathlineto{\pgfqpoint{1.976cm}{-0.035cm}}
\pgfpathlineto{\pgfqpoint{1.976cm}{1.94cm}}
\pgfpathlineto{\pgfqpoint{0cm}{1.94cm}}
\pgfpathclose
\pgfusepath{clip}
\begin{pgfscope}
\begin{pgfscope}
\pgfsetdash{}{0cm}
\pgfsetlinewidth{0.818mm}
\pgfsetroundcap
\pgfsetroundjoin
\pgfsetmiterlimit{7.0}
\definecolor{eps2pgf_color}{gray}{0}\pgfsetstrokecolor{eps2pgf_color}\pgfsetfillcolor{eps2pgf_color}
\pgfpathmoveto{\pgfqpoint{0.117cm}{1.815cm}}
\pgfpathlineto{\pgfqpoint{0.682cm}{1.065cm}}
\pgfpathlineto{\pgfqpoint{1.246cm}{1.815cm}}
\pgfusepath{stroke}
\end{pgfscope}
\definecolor{eps2pgf_color}{gray}{0}\pgfsetstrokecolor{eps2pgf_color}\pgfsetfillcolor{eps2pgf_color}
\pgfpathmoveto{\pgfqpoint{0.273cm}{1.789cm}}
\pgfpathcurveto{\pgfqpoint{0.273cm}{1.825cm}}{\pgfqpoint{0.259cm}{1.86cm}}{\pgfqpoint{0.233cm}{1.886cm}}
\pgfpathcurveto{\pgfqpoint{0.207cm}{1.912cm}}{\pgfqpoint{0.173cm}{1.926cm}}{\pgfqpoint{0.137cm}{1.926cm}}
\pgfpathcurveto{\pgfqpoint{0.1cm}{1.926cm}}{\pgfqpoint{0.066cm}{1.912cm}}{\pgfqpoint{0.04cm}{1.886cm}}
\pgfpathcurveto{\pgfqpoint{0.014cm}{1.86cm}}{\pgfqpoint{0cm}{1.825cm}}{\pgfqpoint{0cm}{1.789cm}}
\pgfpathcurveto{\pgfqpoint{0cm}{1.753cm}}{\pgfqpoint{0.014cm}{1.718cm}}{\pgfqpoint{0.04cm}{1.692cm}}
\pgfpathcurveto{\pgfqpoint{0.066cm}{1.667cm}}{\pgfqpoint{0.1cm}{1.652cm}}{\pgfqpoint{0.137cm}{1.652cm}}
\pgfpathcurveto{\pgfqpoint{0.173cm}{1.652cm}}{\pgfqpoint{0.207cm}{1.667cm}}{\pgfqpoint{0.233cm}{1.692cm}}
\pgfpathcurveto{\pgfqpoint{0.259cm}{1.718cm}}{\pgfqpoint{0.273cm}{1.753cm}}{\pgfqpoint{0.273cm}{1.789cm}}
\pgfusepath{fill}
\pgfpathmoveto{\pgfqpoint{1.345cm}{1.765cm}}
\pgfpathcurveto{\pgfqpoint{1.345cm}{1.801cm}}{\pgfqpoint{1.331cm}{1.836cm}}{\pgfqpoint{1.305cm}{1.862cm}}
\pgfpathcurveto{\pgfqpoint{1.28cm}{1.887cm}}{\pgfqpoint{1.245cm}{1.902cm}}{\pgfqpoint{1.209cm}{1.902cm}}
\pgfpathcurveto{\pgfqpoint{1.172cm}{1.902cm}}{\pgfqpoint{1.138cm}{1.887cm}}{\pgfqpoint{1.112cm}{1.862cm}}
\pgfpathcurveto{\pgfqpoint{1.087cm}{1.836cm}}{\pgfqpoint{1.072cm}{1.801cm}}{\pgfqpoint{1.072cm}{1.765cm}}
\pgfpathcurveto{\pgfqpoint{1.072cm}{1.728cm}}{\pgfqpoint{1.087cm}{1.694cm}}{\pgfqpoint{1.112cm}{1.668cm}}
\pgfpathcurveto{\pgfqpoint{1.138cm}{1.642cm}}{\pgfqpoint{1.172cm}{1.628cm}}{\pgfqpoint{1.209cm}{1.628cm}}
\pgfpathcurveto{\pgfqpoint{1.245cm}{1.628cm}}{\pgfqpoint{1.28cm}{1.642cm}}{\pgfqpoint{1.305cm}{1.668cm}}
\pgfpathcurveto{\pgfqpoint{1.331cm}{1.694cm}}{\pgfqpoint{1.345cm}{1.728cm}}{\pgfqpoint{1.345cm}{1.765cm}}
\pgfusepath{fill}
\begin{pgfscope}
\pgfsetdash{}{0cm}
\pgfsetlinewidth{0.818mm}
\pgfsetroundcap
\pgfsetroundjoin
\pgfsetmiterlimit{7.0}
\pgfpathmoveto{\pgfqpoint{0.682cm}{1.065cm}}
\pgfpathlineto{\pgfqpoint{1.246cm}{0.315cm}}
\pgfpathlineto{\pgfqpoint{1.811cm}{1.065cm}}
\pgfusepath{stroke}
\end{pgfscope}
\pgfpathmoveto{\pgfqpoint{1.948cm}{1.065cm}}
\pgfpathcurveto{\pgfqpoint{1.948cm}{1.101cm}}{\pgfqpoint{1.933cm}{1.136cm}}{\pgfqpoint{1.907cm}{1.162cm}}
\pgfpathcurveto{\pgfqpoint{1.882cm}{1.187cm}}{\pgfqpoint{1.847cm}{1.202cm}}{\pgfqpoint{1.811cm}{1.202cm}}
\pgfpathcurveto{\pgfqpoint{1.775cm}{1.202cm}}{\pgfqpoint{1.74cm}{1.187cm}}{\pgfqpoint{1.714cm}{1.162cm}}
\pgfpathcurveto{\pgfqpoint{1.689cm}{1.136cm}}{\pgfqpoint{1.674cm}{1.101cm}}{\pgfqpoint{1.674cm}{1.065cm}}
\pgfpathcurveto{\pgfqpoint{1.674cm}{1.029cm}}{\pgfqpoint{1.689cm}{0.994cm}}{\pgfqpoint{1.714cm}{0.968cm}}
\pgfpathcurveto{\pgfqpoint{1.74cm}{0.942cm}}{\pgfqpoint{1.775cm}{0.928cm}}{\pgfqpoint{1.811cm}{0.928cm}}
\pgfpathcurveto{\pgfqpoint{1.847cm}{0.928cm}}{\pgfqpoint{1.882cm}{0.942cm}}{\pgfqpoint{1.907cm}{0.968cm}}
\pgfpathcurveto{\pgfqpoint{1.933cm}{0.994cm}}{\pgfqpoint{1.948cm}{1.029cm}}{\pgfqpoint{1.948cm}{1.065cm}}
\pgfusepath{fill}
\begin{pgfscope}
\pgfsetdash{}{0cm}
\pgfsetlinewidth{0.818mm}
\pgfsetmiterlimit{7.0}
\pgfpathmoveto{\pgfqpoint{1.246cm}{0.315cm}}
\pgfpathlineto{\pgfqpoint{1.244cm}{1.061cm}}
\pgfusepath{stroke}
\end{pgfscope}
\pgfpathmoveto{\pgfqpoint{1.38cm}{1.065cm}}
\pgfpathcurveto{\pgfqpoint{1.38cm}{1.101cm}}{\pgfqpoint{1.366cm}{1.136cm}}{\pgfqpoint{1.34cm}{1.162cm}}
\pgfpathcurveto{\pgfqpoint{1.315cm}{1.187cm}}{\pgfqpoint{1.28cm}{1.202cm}}{\pgfqpoint{1.244cm}{1.202cm}}
\pgfpathcurveto{\pgfqpoint{1.207cm}{1.202cm}}{\pgfqpoint{1.173cm}{1.187cm}}{\pgfqpoint{1.147cm}{1.162cm}}
\pgfpathcurveto{\pgfqpoint{1.121cm}{1.136cm}}{\pgfqpoint{1.107cm}{1.101cm}}{\pgfqpoint{1.107cm}{1.065cm}}
\pgfpathcurveto{\pgfqpoint{1.107cm}{1.029cm}}{\pgfqpoint{1.121cm}{0.994cm}}{\pgfqpoint{1.147cm}{0.968cm}}
\pgfpathcurveto{\pgfqpoint{1.173cm}{0.942cm}}{\pgfqpoint{1.207cm}{0.928cm}}{\pgfqpoint{1.244cm}{0.928cm}}
\pgfpathcurveto{\pgfqpoint{1.28cm}{0.928cm}}{\pgfqpoint{1.315cm}{0.942cm}}{\pgfqpoint{1.34cm}{0.968cm}}
\pgfpathcurveto{\pgfqpoint{1.366cm}{0.994cm}}{\pgfqpoint{1.38cm}{1.029cm}}{\pgfqpoint{1.38cm}{1.065cm}}
\pgfusepath{fill}
\begin{pgfscope}
\pgfsetdash{}{0cm}
\pgfsetlinewidth{0.818mm}
\pgfsetmiterlimit{4.0}
\pgfpathmoveto{\pgfqpoint{1.383cm}{0.178cm}}
\pgfpathcurveto{\pgfqpoint{1.383cm}{0.214cm}}{\pgfqpoint{1.369cm}{0.249cm}}{\pgfqpoint{1.343cm}{0.275cm}}
\pgfpathcurveto{\pgfqpoint{1.317cm}{0.3cm}}{\pgfqpoint{1.283cm}{0.315cm}}{\pgfqpoint{1.246cm}{0.315cm}}
\pgfpathcurveto{\pgfqpoint{1.21cm}{0.315cm}}{\pgfqpoint{1.175cm}{0.3cm}}{\pgfqpoint{1.15cm}{0.275cm}}
\pgfpathcurveto{\pgfqpoint{1.124cm}{0.249cm}}{\pgfqpoint{1.11cm}{0.214cm}}{\pgfqpoint{1.11cm}{0.178cm}}
\pgfpathcurveto{\pgfqpoint{1.11cm}{0.141cm}}{\pgfqpoint{1.124cm}{0.107cm}}{\pgfqpoint{1.15cm}{0.081cm}}
\pgfpathcurveto{\pgfqpoint{1.175cm}{0.055cm}}{\pgfqpoint{1.21cm}{0.041cm}}{\pgfqpoint{1.246cm}{0.041cm}}
\pgfpathcurveto{\pgfqpoint{1.283cm}{0.041cm}}{\pgfqpoint{1.317cm}{0.055cm}}{\pgfqpoint{1.343cm}{0.081cm}}
\pgfpathcurveto{\pgfqpoint{1.369cm}{0.107cm}}{\pgfqpoint{1.383cm}{0.141cm}}{\pgfqpoint{1.383cm}{0.178cm}}
\pgfusepath{stroke}
\end{pgfscope}
\end{pgfscope}
\end{pgfscope}
\end{pgfscope}
\end{tikzpicture}}} = 9 \llbracket X_{M, \varepsilon}^2
   \rrbracket \circ X_{M, \varepsilon}^{\!\resizebox{0.6em}{!}{
\begin{tikzpicture}
\pgfpathmoveto{\pgfqpoint{0cm}{0cm}}
\pgfpathlineto{\pgfqpoint{1.376cm}{0cm}}
\pgfpathlineto{\pgfqpoint{1.376cm}{1.588cm}}
\pgfpathlineto{\pgfqpoint{0cm}{1.588cm}}
\pgfpathclose
\pgfusepath{clip}
\begin{pgfscope}
\begin{pgfscope}
\pgfpathmoveto{\pgfqpoint{0cm}{0cm}}
\pgfpathlineto{\pgfqpoint{1.376cm}{0cm}}
\pgfpathlineto{\pgfqpoint{1.376cm}{1.588cm}}
\pgfpathlineto{\pgfqpoint{0cm}{1.588cm}}
\pgfpathclose
\pgfusepath{clip}
\begin{pgfscope}
\begin{pgfscope}
\definecolor{eps2pgf_color}{gray}{0.976471}\pgfsetstrokecolor{eps2pgf_color}\pgfsetfillcolor{eps2pgf_color}
\pgfpathmoveto{\pgfqpoint{0cm}{0cm}}
\pgfpathlineto{\pgfqpoint{1.376cm}{0cm}}
\pgfpathlineto{\pgfqpoint{1.376cm}{1.588cm}}
\pgfpathlineto{\pgfqpoint{0cm}{1.588cm}}
\pgfpathclose
\pgfusepath{fill}
\end{pgfscope}
\begin{pgfscope}
\pgfsetdash{}{0cm}
\pgfsetlinewidth{0.818mm}
\pgfsetroundcap
\pgfsetroundjoin
\pgfsetmiterlimit{7.0}
\definecolor{eps2pgf_color}{gray}{0}\pgfsetstrokecolor{eps2pgf_color}\pgfsetfillcolor{eps2pgf_color}
\pgfpathmoveto{\pgfqpoint{0.117cm}{1.476cm}}
\pgfpathlineto{\pgfqpoint{0.682cm}{0.726cm}}
\pgfpathlineto{\pgfqpoint{1.246cm}{1.476cm}}
\pgfusepath{stroke}
\end{pgfscope}
\definecolor{eps2pgf_color}{gray}{0}\pgfsetstrokecolor{eps2pgf_color}\pgfsetfillcolor{eps2pgf_color}
\pgfpathmoveto{\pgfqpoint{0.273cm}{1.451cm}}
\pgfpathcurveto{\pgfqpoint{0.273cm}{1.487cm}}{\pgfqpoint{0.259cm}{1.522cm}}{\pgfqpoint{0.233cm}{1.547cm}}
\pgfpathcurveto{\pgfqpoint{0.207cm}{1.573cm}}{\pgfqpoint{0.173cm}{1.588cm}}{\pgfqpoint{0.137cm}{1.588cm}}
\pgfpathcurveto{\pgfqpoint{0.1cm}{1.588cm}}{\pgfqpoint{0.066cm}{1.573cm}}{\pgfqpoint{0.04cm}{1.547cm}}
\pgfpathcurveto{\pgfqpoint{0.014cm}{1.522cm}}{\pgfqpoint{0cm}{1.487cm}}{\pgfqpoint{0cm}{1.451cm}}
\pgfpathcurveto{\pgfqpoint{0cm}{1.414cm}}{\pgfqpoint{0.014cm}{1.379cm}}{\pgfqpoint{0.04cm}{1.354cm}}
\pgfpathcurveto{\pgfqpoint{0.066cm}{1.328cm}}{\pgfqpoint{0.1cm}{1.314cm}}{\pgfqpoint{0.137cm}{1.314cm}}
\pgfpathcurveto{\pgfqpoint{0.173cm}{1.314cm}}{\pgfqpoint{0.207cm}{1.328cm}}{\pgfqpoint{0.233cm}{1.354cm}}
\pgfpathcurveto{\pgfqpoint{0.259cm}{1.379cm}}{\pgfqpoint{0.273cm}{1.414cm}}{\pgfqpoint{0.273cm}{1.451cm}}
\pgfusepath{fill}
\pgfpathmoveto{\pgfqpoint{1.345cm}{1.426cm}}
\pgfpathcurveto{\pgfqpoint{1.345cm}{1.463cm}}{\pgfqpoint{1.331cm}{1.497cm}}{\pgfqpoint{1.305cm}{1.523cm}}
\pgfpathcurveto{\pgfqpoint{1.28cm}{1.549cm}}{\pgfqpoint{1.245cm}{1.563cm}}{\pgfqpoint{1.209cm}{1.563cm}}
\pgfpathcurveto{\pgfqpoint{1.172cm}{1.563cm}}{\pgfqpoint{1.138cm}{1.549cm}}{\pgfqpoint{1.112cm}{1.523cm}}
\pgfpathcurveto{\pgfqpoint{1.087cm}{1.497cm}}{\pgfqpoint{1.072cm}{1.463cm}}{\pgfqpoint{1.072cm}{1.426cm}}
\pgfpathcurveto{\pgfqpoint{1.072cm}{1.39cm}}{\pgfqpoint{1.087cm}{1.355cm}}{\pgfqpoint{1.112cm}{1.329cm}}
\pgfpathcurveto{\pgfqpoint{1.138cm}{1.304cm}}{\pgfqpoint{1.172cm}{1.289cm}}{\pgfqpoint{1.209cm}{1.289cm}}
\pgfpathcurveto{\pgfqpoint{1.245cm}{1.289cm}}{\pgfqpoint{1.28cm}{1.304cm}}{\pgfqpoint{1.305cm}{1.329cm}}
\pgfpathcurveto{\pgfqpoint{1.331cm}{1.355cm}}{\pgfqpoint{1.345cm}{1.39cm}}{\pgfqpoint{1.345cm}{1.426cm}}
\pgfusepath{fill}
\begin{pgfscope}
\pgfsetdash{}{0cm}
\pgfsetlinewidth{0.818mm}
\pgfsetroundcap
\pgfsetmiterlimit{4.0}
\pgfpathmoveto{\pgfqpoint{0.682cm}{0.726cm}}
\pgfpathlineto{\pgfqpoint{0.682cm}{0.097cm}}
\pgfusepath{stroke}
\end{pgfscope}
\end{pgfscope}
\end{pgfscope}
\end{pgfscope}
\end{tikzpicture}}} - 3 \tilde{b}_{M,
   \varepsilon} (t), \qquad X_{M, \varepsilon}^{\!\resizebox{!}{.8em}{
\begin{tikzpicture}
\pgfpathmoveto{\pgfqpoint{0cm}{-0.035cm}}
\pgfpathlineto{\pgfqpoint{1.976cm}{-0.035cm}}
\pgfpathlineto{\pgfqpoint{1.976cm}{1.94cm}}
\pgfpathlineto{\pgfqpoint{0cm}{1.94cm}}
\pgfpathclose
\pgfusepath{clip}
\begin{pgfscope}
\begin{pgfscope}
\pgfpathmoveto{\pgfqpoint{0cm}{-0.035cm}}
\pgfpathlineto{\pgfqpoint{1.976cm}{-0.035cm}}
\pgfpathlineto{\pgfqpoint{1.976cm}{1.94cm}}
\pgfpathlineto{\pgfqpoint{0cm}{1.94cm}}
\pgfpathclose
\pgfusepath{clip}
\begin{pgfscope}
\begin{pgfscope}
\pgfsetdash{}{0cm}
\pgfsetlinewidth{0.818mm}
\pgfsetroundcap
\pgfsetroundjoin
\pgfsetmiterlimit{7.0}
\definecolor{eps2pgf_color}{gray}{0}\pgfsetstrokecolor{eps2pgf_color}\pgfsetfillcolor{eps2pgf_color}
\pgfpathmoveto{\pgfqpoint{0.117cm}{1.815cm}}
\pgfpathlineto{\pgfqpoint{0.682cm}{1.065cm}}
\pgfpathlineto{\pgfqpoint{1.246cm}{1.815cm}}
\pgfusepath{stroke}
\end{pgfscope}
\definecolor{eps2pgf_color}{gray}{0}\pgfsetstrokecolor{eps2pgf_color}\pgfsetfillcolor{eps2pgf_color}
\pgfpathmoveto{\pgfqpoint{0.273cm}{1.789cm}}
\pgfpathcurveto{\pgfqpoint{0.273cm}{1.825cm}}{\pgfqpoint{0.259cm}{1.86cm}}{\pgfqpoint{0.233cm}{1.886cm}}
\pgfpathcurveto{\pgfqpoint{0.207cm}{1.912cm}}{\pgfqpoint{0.173cm}{1.926cm}}{\pgfqpoint{0.137cm}{1.926cm}}
\pgfpathcurveto{\pgfqpoint{0.1cm}{1.926cm}}{\pgfqpoint{0.066cm}{1.912cm}}{\pgfqpoint{0.04cm}{1.886cm}}
\pgfpathcurveto{\pgfqpoint{0.014cm}{1.86cm}}{\pgfqpoint{0cm}{1.825cm}}{\pgfqpoint{0cm}{1.789cm}}
\pgfpathcurveto{\pgfqpoint{0cm}{1.753cm}}{\pgfqpoint{0.014cm}{1.718cm}}{\pgfqpoint{0.04cm}{1.692cm}}
\pgfpathcurveto{\pgfqpoint{0.066cm}{1.667cm}}{\pgfqpoint{0.1cm}{1.652cm}}{\pgfqpoint{0.137cm}{1.652cm}}
\pgfpathcurveto{\pgfqpoint{0.173cm}{1.652cm}}{\pgfqpoint{0.207cm}{1.667cm}}{\pgfqpoint{0.233cm}{1.692cm}}
\pgfpathcurveto{\pgfqpoint{0.259cm}{1.718cm}}{\pgfqpoint{0.273cm}{1.753cm}}{\pgfqpoint{0.273cm}{1.789cm}}
\pgfusepath{fill}
\begin{pgfscope}
\pgfsetdash{}{0cm}
\pgfsetlinewidth{0.818mm}
\pgfsetmiterlimit{7.0}
\pgfpathmoveto{\pgfqpoint{0.682cm}{1.065cm}}
\pgfpathlineto{\pgfqpoint{0.679cm}{1.812cm}}
\pgfusepath{stroke}
\end{pgfscope}
\pgfpathmoveto{\pgfqpoint{0.815cm}{1.793cm}}
\pgfpathcurveto{\pgfqpoint{0.815cm}{1.829cm}}{\pgfqpoint{0.801cm}{1.864cm}}{\pgfqpoint{0.775cm}{1.89cm}}
\pgfpathcurveto{\pgfqpoint{0.75cm}{1.915cm}}{\pgfqpoint{0.715cm}{1.93cm}}{\pgfqpoint{0.679cm}{1.93cm}}
\pgfpathcurveto{\pgfqpoint{0.643cm}{1.93cm}}{\pgfqpoint{0.608cm}{1.915cm}}{\pgfqpoint{0.582cm}{1.89cm}}
\pgfpathcurveto{\pgfqpoint{0.557cm}{1.864cm}}{\pgfqpoint{0.542cm}{1.829cm}}{\pgfqpoint{0.542cm}{1.793cm}}
\pgfpathcurveto{\pgfqpoint{0.542cm}{1.756cm}}{\pgfqpoint{0.557cm}{1.722cm}}{\pgfqpoint{0.582cm}{1.696cm}}
\pgfpathcurveto{\pgfqpoint{0.608cm}{1.67cm}}{\pgfqpoint{0.643cm}{1.656cm}}{\pgfqpoint{0.679cm}{1.656cm}}
\pgfpathcurveto{\pgfqpoint{0.715cm}{1.656cm}}{\pgfqpoint{0.75cm}{1.67cm}}{\pgfqpoint{0.775cm}{1.696cm}}
\pgfpathcurveto{\pgfqpoint{0.801cm}{1.722cm}}{\pgfqpoint{0.815cm}{1.756cm}}{\pgfqpoint{0.815cm}{1.793cm}}
\pgfusepath{fill}
\pgfpathmoveto{\pgfqpoint{1.345cm}{1.765cm}}
\pgfpathcurveto{\pgfqpoint{1.345cm}{1.801cm}}{\pgfqpoint{1.331cm}{1.836cm}}{\pgfqpoint{1.305cm}{1.862cm}}
\pgfpathcurveto{\pgfqpoint{1.28cm}{1.887cm}}{\pgfqpoint{1.245cm}{1.902cm}}{\pgfqpoint{1.209cm}{1.902cm}}
\pgfpathcurveto{\pgfqpoint{1.172cm}{1.902cm}}{\pgfqpoint{1.138cm}{1.887cm}}{\pgfqpoint{1.112cm}{1.862cm}}
\pgfpathcurveto{\pgfqpoint{1.087cm}{1.836cm}}{\pgfqpoint{1.072cm}{1.801cm}}{\pgfqpoint{1.072cm}{1.765cm}}
\pgfpathcurveto{\pgfqpoint{1.072cm}{1.728cm}}{\pgfqpoint{1.087cm}{1.694cm}}{\pgfqpoint{1.112cm}{1.668cm}}
\pgfpathcurveto{\pgfqpoint{1.138cm}{1.642cm}}{\pgfqpoint{1.172cm}{1.628cm}}{\pgfqpoint{1.209cm}{1.628cm}}
\pgfpathcurveto{\pgfqpoint{1.245cm}{1.628cm}}{\pgfqpoint{1.28cm}{1.642cm}}{\pgfqpoint{1.305cm}{1.668cm}}
\pgfpathcurveto{\pgfqpoint{1.331cm}{1.694cm}}{\pgfqpoint{1.345cm}{1.728cm}}{\pgfqpoint{1.345cm}{1.765cm}}
\pgfusepath{fill}
\begin{pgfscope}
\pgfsetdash{}{0cm}
\pgfsetlinewidth{0.818mm}
\pgfsetroundcap
\pgfsetroundjoin
\pgfsetmiterlimit{7.0}
\pgfpathmoveto{\pgfqpoint{0.682cm}{1.065cm}}
\pgfpathlineto{\pgfqpoint{1.246cm}{0.315cm}}
\pgfpathlineto{\pgfqpoint{1.811cm}{1.065cm}}
\pgfusepath{stroke}
\end{pgfscope}
\pgfpathmoveto{\pgfqpoint{1.948cm}{1.065cm}}
\pgfpathcurveto{\pgfqpoint{1.948cm}{1.101cm}}{\pgfqpoint{1.933cm}{1.136cm}}{\pgfqpoint{1.907cm}{1.162cm}}
\pgfpathcurveto{\pgfqpoint{1.882cm}{1.187cm}}{\pgfqpoint{1.847cm}{1.202cm}}{\pgfqpoint{1.811cm}{1.202cm}}
\pgfpathcurveto{\pgfqpoint{1.775cm}{1.202cm}}{\pgfqpoint{1.74cm}{1.187cm}}{\pgfqpoint{1.714cm}{1.162cm}}
\pgfpathcurveto{\pgfqpoint{1.689cm}{1.136cm}}{\pgfqpoint{1.674cm}{1.101cm}}{\pgfqpoint{1.674cm}{1.065cm}}
\pgfpathcurveto{\pgfqpoint{1.674cm}{1.029cm}}{\pgfqpoint{1.689cm}{0.994cm}}{\pgfqpoint{1.714cm}{0.968cm}}
\pgfpathcurveto{\pgfqpoint{1.74cm}{0.942cm}}{\pgfqpoint{1.775cm}{0.928cm}}{\pgfqpoint{1.811cm}{0.928cm}}
\pgfpathcurveto{\pgfqpoint{1.847cm}{0.928cm}}{\pgfqpoint{1.882cm}{0.942cm}}{\pgfqpoint{1.907cm}{0.968cm}}
\pgfpathcurveto{\pgfqpoint{1.933cm}{0.994cm}}{\pgfqpoint{1.948cm}{1.029cm}}{\pgfqpoint{1.948cm}{1.065cm}}
\pgfusepath{fill}
\begin{pgfscope}
\pgfsetdash{}{0cm}
\pgfsetlinewidth{0.818mm}
\pgfsetmiterlimit{7.0}
\pgfpathmoveto{\pgfqpoint{1.246cm}{0.315cm}}
\pgfpathlineto{\pgfqpoint{1.244cm}{1.061cm}}
\pgfusepath{stroke}
\end{pgfscope}
\pgfpathmoveto{\pgfqpoint{1.38cm}{1.065cm}}
\pgfpathcurveto{\pgfqpoint{1.38cm}{1.101cm}}{\pgfqpoint{1.366cm}{1.136cm}}{\pgfqpoint{1.34cm}{1.162cm}}
\pgfpathcurveto{\pgfqpoint{1.315cm}{1.187cm}}{\pgfqpoint{1.28cm}{1.202cm}}{\pgfqpoint{1.244cm}{1.202cm}}
\pgfpathcurveto{\pgfqpoint{1.207cm}{1.202cm}}{\pgfqpoint{1.173cm}{1.187cm}}{\pgfqpoint{1.147cm}{1.162cm}}
\pgfpathcurveto{\pgfqpoint{1.121cm}{1.136cm}}{\pgfqpoint{1.107cm}{1.101cm}}{\pgfqpoint{1.107cm}{1.065cm}}
\pgfpathcurveto{\pgfqpoint{1.107cm}{1.029cm}}{\pgfqpoint{1.121cm}{0.994cm}}{\pgfqpoint{1.147cm}{0.968cm}}
\pgfpathcurveto{\pgfqpoint{1.173cm}{0.942cm}}{\pgfqpoint{1.207cm}{0.928cm}}{\pgfqpoint{1.244cm}{0.928cm}}
\pgfpathcurveto{\pgfqpoint{1.28cm}{0.928cm}}{\pgfqpoint{1.315cm}{0.942cm}}{\pgfqpoint{1.34cm}{0.968cm}}
\pgfpathcurveto{\pgfqpoint{1.366cm}{0.994cm}}{\pgfqpoint{1.38cm}{1.029cm}}{\pgfqpoint{1.38cm}{1.065cm}}
\pgfusepath{fill}
\begin{pgfscope}
\pgfsetdash{}{0cm}
\pgfsetlinewidth{0.818mm}
\pgfsetmiterlimit{4.0}
\pgfpathmoveto{\pgfqpoint{1.383cm}{0.178cm}}
\pgfpathcurveto{\pgfqpoint{1.383cm}{0.214cm}}{\pgfqpoint{1.369cm}{0.249cm}}{\pgfqpoint{1.343cm}{0.275cm}}
\pgfpathcurveto{\pgfqpoint{1.317cm}{0.3cm}}{\pgfqpoint{1.283cm}{0.315cm}}{\pgfqpoint{1.246cm}{0.315cm}}
\pgfpathcurveto{\pgfqpoint{1.21cm}{0.315cm}}{\pgfqpoint{1.175cm}{0.3cm}}{\pgfqpoint{1.15cm}{0.275cm}}
\pgfpathcurveto{\pgfqpoint{1.124cm}{0.249cm}}{\pgfqpoint{1.11cm}{0.214cm}}{\pgfqpoint{1.11cm}{0.178cm}}
\pgfpathcurveto{\pgfqpoint{1.11cm}{0.141cm}}{\pgfqpoint{1.124cm}{0.107cm}}{\pgfqpoint{1.15cm}{0.081cm}}
\pgfpathcurveto{\pgfqpoint{1.175cm}{0.055cm}}{\pgfqpoint{1.21cm}{0.041cm}}{\pgfqpoint{1.246cm}{0.041cm}}
\pgfpathcurveto{\pgfqpoint{1.283cm}{0.041cm}}{\pgfqpoint{1.317cm}{0.055cm}}{\pgfqpoint{1.343cm}{0.081cm}}
\pgfpathcurveto{\pgfqpoint{1.369cm}{0.107cm}}{\pgfqpoint{1.383cm}{0.141cm}}{\pgfqpoint{1.383cm}{0.178cm}}
\pgfusepath{stroke}
\end{pgfscope}
\end{pgfscope}
\end{pgfscope}
\end{pgfscope}
\end{tikzpicture}}} = 3 \llbracket
   X_{M, \varepsilon}^2 \rrbracket \circ X_{M, \varepsilon}^{\!\resizebox{0.6em}{!}{
\begin{tikzpicture}
\pgfpathmoveto{\pgfqpoint{0cm}{-0.035cm}}
\pgfpathlineto{\pgfqpoint{1.376cm}{-0.035cm}}
\pgfpathlineto{\pgfqpoint{1.376cm}{1.552cm}}
\pgfpathlineto{\pgfqpoint{0cm}{1.552cm}}
\pgfpathclose
\pgfusepath{clip}
\begin{pgfscope}
\begin{pgfscope}
\pgfpathmoveto{\pgfqpoint{0cm}{-0.035cm}}
\pgfpathlineto{\pgfqpoint{1.376cm}{-0.035cm}}
\pgfpathlineto{\pgfqpoint{1.376cm}{1.552cm}}
\pgfpathlineto{\pgfqpoint{0cm}{1.552cm}}
\pgfpathclose
\pgfusepath{clip}
\begin{pgfscope}
\begin{pgfscope}
\pgfsetdash{}{0cm}
\pgfsetlinewidth{0.818mm}
\pgfsetroundcap
\pgfsetroundjoin
\pgfsetmiterlimit{7.0}
\definecolor{eps2pgf_color}{gray}{0}\pgfsetstrokecolor{eps2pgf_color}\pgfsetfillcolor{eps2pgf_color}
\pgfpathmoveto{\pgfqpoint{0.117cm}{1.421cm}}
\pgfpathlineto{\pgfqpoint{0.682cm}{0.671cm}}
\pgfpathlineto{\pgfqpoint{1.246cm}{1.421cm}}
\pgfusepath{stroke}
\end{pgfscope}
\definecolor{eps2pgf_color}{gray}{0}\pgfsetstrokecolor{eps2pgf_color}\pgfsetfillcolor{eps2pgf_color}
\pgfpathmoveto{\pgfqpoint{0.273cm}{1.395cm}}
\pgfpathcurveto{\pgfqpoint{0.273cm}{1.432cm}}{\pgfqpoint{0.259cm}{1.467cm}}{\pgfqpoint{0.233cm}{1.492cm}}
\pgfpathcurveto{\pgfqpoint{0.207cm}{1.518cm}}{\pgfqpoint{0.173cm}{1.532cm}}{\pgfqpoint{0.137cm}{1.532cm}}
\pgfpathcurveto{\pgfqpoint{0.1cm}{1.532cm}}{\pgfqpoint{0.066cm}{1.518cm}}{\pgfqpoint{0.04cm}{1.492cm}}
\pgfpathcurveto{\pgfqpoint{0.014cm}{1.467cm}}{\pgfqpoint{0cm}{1.432cm}}{\pgfqpoint{0cm}{1.395cm}}
\pgfpathcurveto{\pgfqpoint{0cm}{1.359cm}}{\pgfqpoint{0.014cm}{1.324cm}}{\pgfqpoint{0.04cm}{1.299cm}}
\pgfpathcurveto{\pgfqpoint{0.066cm}{1.273cm}}{\pgfqpoint{0.1cm}{1.258cm}}{\pgfqpoint{0.137cm}{1.258cm}}
\pgfpathcurveto{\pgfqpoint{0.173cm}{1.258cm}}{\pgfqpoint{0.207cm}{1.273cm}}{\pgfqpoint{0.233cm}{1.299cm}}
\pgfpathcurveto{\pgfqpoint{0.259cm}{1.324cm}}{\pgfqpoint{0.273cm}{1.359cm}}{\pgfqpoint{0.273cm}{1.395cm}}
\pgfusepath{fill}
\begin{pgfscope}
\pgfsetdash{}{0cm}
\pgfsetlinewidth{0.818mm}
\pgfsetmiterlimit{7.0}
\pgfpathmoveto{\pgfqpoint{0.682cm}{0.671cm}}
\pgfpathlineto{\pgfqpoint{0.679cm}{1.418cm}}
\pgfusepath{stroke}
\end{pgfscope}
\pgfpathmoveto{\pgfqpoint{0.815cm}{1.399cm}}
\pgfpathcurveto{\pgfqpoint{0.815cm}{1.435cm}}{\pgfqpoint{0.801cm}{1.47cm}}{\pgfqpoint{0.775cm}{1.496cm}}
\pgfpathcurveto{\pgfqpoint{0.75cm}{1.521cm}}{\pgfqpoint{0.715cm}{1.536cm}}{\pgfqpoint{0.679cm}{1.536cm}}
\pgfpathcurveto{\pgfqpoint{0.643cm}{1.536cm}}{\pgfqpoint{0.608cm}{1.521cm}}{\pgfqpoint{0.582cm}{1.496cm}}
\pgfpathcurveto{\pgfqpoint{0.557cm}{1.47cm}}{\pgfqpoint{0.542cm}{1.435cm}}{\pgfqpoint{0.542cm}{1.399cm}}
\pgfpathcurveto{\pgfqpoint{0.542cm}{1.363cm}}{\pgfqpoint{0.557cm}{1.328cm}}{\pgfqpoint{0.582cm}{1.302cm}}
\pgfpathcurveto{\pgfqpoint{0.608cm}{1.276cm}}{\pgfqpoint{0.643cm}{1.262cm}}{\pgfqpoint{0.679cm}{1.262cm}}
\pgfpathcurveto{\pgfqpoint{0.715cm}{1.262cm}}{\pgfqpoint{0.75cm}{1.276cm}}{\pgfqpoint{0.775cm}{1.302cm}}
\pgfpathcurveto{\pgfqpoint{0.801cm}{1.328cm}}{\pgfqpoint{0.815cm}{1.363cm}}{\pgfqpoint{0.815cm}{1.399cm}}
\pgfusepath{fill}
\pgfpathmoveto{\pgfqpoint{1.345cm}{1.371cm}}
\pgfpathcurveto{\pgfqpoint{1.345cm}{1.408cm}}{\pgfqpoint{1.331cm}{1.442cm}}{\pgfqpoint{1.305cm}{1.468cm}}
\pgfpathcurveto{\pgfqpoint{1.28cm}{1.494cm}}{\pgfqpoint{1.245cm}{1.508cm}}{\pgfqpoint{1.209cm}{1.508cm}}
\pgfpathcurveto{\pgfqpoint{1.172cm}{1.508cm}}{\pgfqpoint{1.138cm}{1.494cm}}{\pgfqpoint{1.112cm}{1.468cm}}
\pgfpathcurveto{\pgfqpoint{1.087cm}{1.442cm}}{\pgfqpoint{1.072cm}{1.408cm}}{\pgfqpoint{1.072cm}{1.371cm}}
\pgfpathcurveto{\pgfqpoint{1.072cm}{1.335cm}}{\pgfqpoint{1.087cm}{1.3cm}}{\pgfqpoint{1.112cm}{1.274cm}}
\pgfpathcurveto{\pgfqpoint{1.138cm}{1.249cm}}{\pgfqpoint{1.172cm}{1.234cm}}{\pgfqpoint{1.209cm}{1.234cm}}
\pgfpathcurveto{\pgfqpoint{1.245cm}{1.234cm}}{\pgfqpoint{1.28cm}{1.249cm}}{\pgfqpoint{1.305cm}{1.274cm}}
\pgfpathcurveto{\pgfqpoint{1.331cm}{1.3cm}}{\pgfqpoint{1.345cm}{1.335cm}}{\pgfqpoint{1.345cm}{1.371cm}}
\pgfusepath{fill}
\begin{pgfscope}
\pgfsetdash{}{0cm}
\pgfsetlinewidth{0.818mm}
\pgfsetroundcap
\pgfsetmiterlimit{4.0}
\pgfpathmoveto{\pgfqpoint{0.682cm}{0.671cm}}
\pgfpathlineto{\pgfqpoint{0.682cm}{0.042cm}}
\pgfusepath{stroke}
\end{pgfscope}
\end{pgfscope}
\end{pgfscope}
\end{pgfscope}
\end{tikzpicture}}} - 3
   b_{M, \varepsilon} X_{M, \varepsilon}, \]
where $b_{M, \varepsilon}, \tilde{b}_{M, \varepsilon} (t)$ are suitable
renormalization constants. It follows from standard estimates that $|
\tilde{b}_{M, \varepsilon} (t) - b_{M, \varepsilon} | \lesssim | \log t |$
uniformly in $M, \varepsilon$. We denote collectively
\begin{equation}\label{eq:XX}
 \mathbb{X}_{M, \varepsilon} \assign (X_{M, \varepsilon}, \llbracket X_{M,
   \varepsilon}^2 \rrbracket, X_{M, \varepsilon}^{\!\resizebox{0.6em}{!}{
\begin{tikzpicture}
\pgfpathmoveto{\pgfqpoint{0cm}{-0.035cm}}
\pgfpathlineto{\pgfqpoint{1.376cm}{-0.035cm}}
\pgfpathlineto{\pgfqpoint{1.376cm}{1.552cm}}
\pgfpathlineto{\pgfqpoint{0cm}{1.552cm}}
\pgfpathclose
\pgfusepath{clip}
\begin{pgfscope}
\begin{pgfscope}
\pgfpathmoveto{\pgfqpoint{0cm}{-0.035cm}}
\pgfpathlineto{\pgfqpoint{1.376cm}{-0.035cm}}
\pgfpathlineto{\pgfqpoint{1.376cm}{1.552cm}}
\pgfpathlineto{\pgfqpoint{0cm}{1.552cm}}
\pgfpathclose
\pgfusepath{clip}
\begin{pgfscope}
\begin{pgfscope}
\pgfsetdash{}{0cm}
\pgfsetlinewidth{0.818mm}
\pgfsetroundcap
\pgfsetroundjoin
\pgfsetmiterlimit{7.0}
\definecolor{eps2pgf_color}{gray}{0}\pgfsetstrokecolor{eps2pgf_color}\pgfsetfillcolor{eps2pgf_color}
\pgfpathmoveto{\pgfqpoint{0.117cm}{1.421cm}}
\pgfpathlineto{\pgfqpoint{0.682cm}{0.671cm}}
\pgfpathlineto{\pgfqpoint{1.246cm}{1.421cm}}
\pgfusepath{stroke}
\end{pgfscope}
\definecolor{eps2pgf_color}{gray}{0}\pgfsetstrokecolor{eps2pgf_color}\pgfsetfillcolor{eps2pgf_color}
\pgfpathmoveto{\pgfqpoint{0.273cm}{1.395cm}}
\pgfpathcurveto{\pgfqpoint{0.273cm}{1.432cm}}{\pgfqpoint{0.259cm}{1.467cm}}{\pgfqpoint{0.233cm}{1.492cm}}
\pgfpathcurveto{\pgfqpoint{0.207cm}{1.518cm}}{\pgfqpoint{0.173cm}{1.532cm}}{\pgfqpoint{0.137cm}{1.532cm}}
\pgfpathcurveto{\pgfqpoint{0.1cm}{1.532cm}}{\pgfqpoint{0.066cm}{1.518cm}}{\pgfqpoint{0.04cm}{1.492cm}}
\pgfpathcurveto{\pgfqpoint{0.014cm}{1.467cm}}{\pgfqpoint{0cm}{1.432cm}}{\pgfqpoint{0cm}{1.395cm}}
\pgfpathcurveto{\pgfqpoint{0cm}{1.359cm}}{\pgfqpoint{0.014cm}{1.324cm}}{\pgfqpoint{0.04cm}{1.299cm}}
\pgfpathcurveto{\pgfqpoint{0.066cm}{1.273cm}}{\pgfqpoint{0.1cm}{1.258cm}}{\pgfqpoint{0.137cm}{1.258cm}}
\pgfpathcurveto{\pgfqpoint{0.173cm}{1.258cm}}{\pgfqpoint{0.207cm}{1.273cm}}{\pgfqpoint{0.233cm}{1.299cm}}
\pgfpathcurveto{\pgfqpoint{0.259cm}{1.324cm}}{\pgfqpoint{0.273cm}{1.359cm}}{\pgfqpoint{0.273cm}{1.395cm}}
\pgfusepath{fill}
\begin{pgfscope}
\pgfsetdash{}{0cm}
\pgfsetlinewidth{0.818mm}
\pgfsetmiterlimit{7.0}
\pgfpathmoveto{\pgfqpoint{0.682cm}{0.671cm}}
\pgfpathlineto{\pgfqpoint{0.679cm}{1.418cm}}
\pgfusepath{stroke}
\end{pgfscope}
\pgfpathmoveto{\pgfqpoint{0.815cm}{1.399cm}}
\pgfpathcurveto{\pgfqpoint{0.815cm}{1.435cm}}{\pgfqpoint{0.801cm}{1.47cm}}{\pgfqpoint{0.775cm}{1.496cm}}
\pgfpathcurveto{\pgfqpoint{0.75cm}{1.521cm}}{\pgfqpoint{0.715cm}{1.536cm}}{\pgfqpoint{0.679cm}{1.536cm}}
\pgfpathcurveto{\pgfqpoint{0.643cm}{1.536cm}}{\pgfqpoint{0.608cm}{1.521cm}}{\pgfqpoint{0.582cm}{1.496cm}}
\pgfpathcurveto{\pgfqpoint{0.557cm}{1.47cm}}{\pgfqpoint{0.542cm}{1.435cm}}{\pgfqpoint{0.542cm}{1.399cm}}
\pgfpathcurveto{\pgfqpoint{0.542cm}{1.363cm}}{\pgfqpoint{0.557cm}{1.328cm}}{\pgfqpoint{0.582cm}{1.302cm}}
\pgfpathcurveto{\pgfqpoint{0.608cm}{1.276cm}}{\pgfqpoint{0.643cm}{1.262cm}}{\pgfqpoint{0.679cm}{1.262cm}}
\pgfpathcurveto{\pgfqpoint{0.715cm}{1.262cm}}{\pgfqpoint{0.75cm}{1.276cm}}{\pgfqpoint{0.775cm}{1.302cm}}
\pgfpathcurveto{\pgfqpoint{0.801cm}{1.328cm}}{\pgfqpoint{0.815cm}{1.363cm}}{\pgfqpoint{0.815cm}{1.399cm}}
\pgfusepath{fill}
\pgfpathmoveto{\pgfqpoint{1.345cm}{1.371cm}}
\pgfpathcurveto{\pgfqpoint{1.345cm}{1.408cm}}{\pgfqpoint{1.331cm}{1.442cm}}{\pgfqpoint{1.305cm}{1.468cm}}
\pgfpathcurveto{\pgfqpoint{1.28cm}{1.494cm}}{\pgfqpoint{1.245cm}{1.508cm}}{\pgfqpoint{1.209cm}{1.508cm}}
\pgfpathcurveto{\pgfqpoint{1.172cm}{1.508cm}}{\pgfqpoint{1.138cm}{1.494cm}}{\pgfqpoint{1.112cm}{1.468cm}}
\pgfpathcurveto{\pgfqpoint{1.087cm}{1.442cm}}{\pgfqpoint{1.072cm}{1.408cm}}{\pgfqpoint{1.072cm}{1.371cm}}
\pgfpathcurveto{\pgfqpoint{1.072cm}{1.335cm}}{\pgfqpoint{1.087cm}{1.3cm}}{\pgfqpoint{1.112cm}{1.274cm}}
\pgfpathcurveto{\pgfqpoint{1.138cm}{1.249cm}}{\pgfqpoint{1.172cm}{1.234cm}}{\pgfqpoint{1.209cm}{1.234cm}}
\pgfpathcurveto{\pgfqpoint{1.245cm}{1.234cm}}{\pgfqpoint{1.28cm}{1.249cm}}{\pgfqpoint{1.305cm}{1.274cm}}
\pgfpathcurveto{\pgfqpoint{1.331cm}{1.3cm}}{\pgfqpoint{1.345cm}{1.335cm}}{\pgfqpoint{1.345cm}{1.371cm}}
\pgfusepath{fill}
\begin{pgfscope}
\pgfsetdash{}{0cm}
\pgfsetlinewidth{0.818mm}
\pgfsetroundcap
\pgfsetmiterlimit{4.0}
\pgfpathmoveto{\pgfqpoint{0.682cm}{0.671cm}}
\pgfpathlineto{\pgfqpoint{0.682cm}{0.042cm}}
\pgfusepath{stroke}
\end{pgfscope}
\end{pgfscope}
\end{pgfscope}
\end{pgfscope}
\end{tikzpicture}}},
   X_{M, \varepsilon}^{\!\resizebox{!}{.8em}{
\begin{tikzpicture}
\pgfpathmoveto{\pgfqpoint{0cm}{-0.035cm}}
\pgfpathlineto{\pgfqpoint{1.976cm}{-0.035cm}}
\pgfpathlineto{\pgfqpoint{1.976cm}{1.94cm}}
\pgfpathlineto{\pgfqpoint{0cm}{1.94cm}}
\pgfpathclose
\pgfusepath{clip}
\begin{pgfscope}
\begin{pgfscope}
\pgfpathmoveto{\pgfqpoint{0cm}{-0.035cm}}
\pgfpathlineto{\pgfqpoint{1.976cm}{-0.035cm}}
\pgfpathlineto{\pgfqpoint{1.976cm}{1.94cm}}
\pgfpathlineto{\pgfqpoint{0cm}{1.94cm}}
\pgfpathclose
\pgfusepath{clip}
\begin{pgfscope}
\begin{pgfscope}
\pgfsetdash{}{0cm}
\pgfsetlinewidth{0.818mm}
\pgfsetroundcap
\pgfsetroundjoin
\pgfsetmiterlimit{7.0}
\definecolor{eps2pgf_color}{gray}{0}\pgfsetstrokecolor{eps2pgf_color}\pgfsetfillcolor{eps2pgf_color}
\pgfpathmoveto{\pgfqpoint{0.117cm}{1.815cm}}
\pgfpathlineto{\pgfqpoint{0.682cm}{1.065cm}}
\pgfpathlineto{\pgfqpoint{1.246cm}{1.815cm}}
\pgfusepath{stroke}
\end{pgfscope}
\definecolor{eps2pgf_color}{gray}{0}\pgfsetstrokecolor{eps2pgf_color}\pgfsetfillcolor{eps2pgf_color}
\pgfpathmoveto{\pgfqpoint{0.273cm}{1.789cm}}
\pgfpathcurveto{\pgfqpoint{0.273cm}{1.825cm}}{\pgfqpoint{0.259cm}{1.86cm}}{\pgfqpoint{0.233cm}{1.886cm}}
\pgfpathcurveto{\pgfqpoint{0.207cm}{1.912cm}}{\pgfqpoint{0.173cm}{1.926cm}}{\pgfqpoint{0.137cm}{1.926cm}}
\pgfpathcurveto{\pgfqpoint{0.1cm}{1.926cm}}{\pgfqpoint{0.066cm}{1.912cm}}{\pgfqpoint{0.04cm}{1.886cm}}
\pgfpathcurveto{\pgfqpoint{0.014cm}{1.86cm}}{\pgfqpoint{0cm}{1.825cm}}{\pgfqpoint{0cm}{1.789cm}}
\pgfpathcurveto{\pgfqpoint{0cm}{1.753cm}}{\pgfqpoint{0.014cm}{1.718cm}}{\pgfqpoint{0.04cm}{1.692cm}}
\pgfpathcurveto{\pgfqpoint{0.066cm}{1.667cm}}{\pgfqpoint{0.1cm}{1.652cm}}{\pgfqpoint{0.137cm}{1.652cm}}
\pgfpathcurveto{\pgfqpoint{0.173cm}{1.652cm}}{\pgfqpoint{0.207cm}{1.667cm}}{\pgfqpoint{0.233cm}{1.692cm}}
\pgfpathcurveto{\pgfqpoint{0.259cm}{1.718cm}}{\pgfqpoint{0.273cm}{1.753cm}}{\pgfqpoint{0.273cm}{1.789cm}}
\pgfusepath{fill}
\begin{pgfscope}
\pgfsetdash{}{0cm}
\pgfsetlinewidth{0.818mm}
\pgfsetmiterlimit{7.0}
\pgfpathmoveto{\pgfqpoint{0.682cm}{1.065cm}}
\pgfpathlineto{\pgfqpoint{0.679cm}{1.812cm}}
\pgfusepath{stroke}
\end{pgfscope}
\pgfpathmoveto{\pgfqpoint{0.815cm}{1.793cm}}
\pgfpathcurveto{\pgfqpoint{0.815cm}{1.829cm}}{\pgfqpoint{0.801cm}{1.864cm}}{\pgfqpoint{0.775cm}{1.89cm}}
\pgfpathcurveto{\pgfqpoint{0.75cm}{1.915cm}}{\pgfqpoint{0.715cm}{1.93cm}}{\pgfqpoint{0.679cm}{1.93cm}}
\pgfpathcurveto{\pgfqpoint{0.643cm}{1.93cm}}{\pgfqpoint{0.608cm}{1.915cm}}{\pgfqpoint{0.582cm}{1.89cm}}
\pgfpathcurveto{\pgfqpoint{0.557cm}{1.864cm}}{\pgfqpoint{0.542cm}{1.829cm}}{\pgfqpoint{0.542cm}{1.793cm}}
\pgfpathcurveto{\pgfqpoint{0.542cm}{1.756cm}}{\pgfqpoint{0.557cm}{1.722cm}}{\pgfqpoint{0.582cm}{1.696cm}}
\pgfpathcurveto{\pgfqpoint{0.608cm}{1.67cm}}{\pgfqpoint{0.643cm}{1.656cm}}{\pgfqpoint{0.679cm}{1.656cm}}
\pgfpathcurveto{\pgfqpoint{0.715cm}{1.656cm}}{\pgfqpoint{0.75cm}{1.67cm}}{\pgfqpoint{0.775cm}{1.696cm}}
\pgfpathcurveto{\pgfqpoint{0.801cm}{1.722cm}}{\pgfqpoint{0.815cm}{1.756cm}}{\pgfqpoint{0.815cm}{1.793cm}}
\pgfusepath{fill}
\pgfpathmoveto{\pgfqpoint{1.345cm}{1.765cm}}
\pgfpathcurveto{\pgfqpoint{1.345cm}{1.801cm}}{\pgfqpoint{1.331cm}{1.836cm}}{\pgfqpoint{1.305cm}{1.862cm}}
\pgfpathcurveto{\pgfqpoint{1.28cm}{1.887cm}}{\pgfqpoint{1.245cm}{1.902cm}}{\pgfqpoint{1.209cm}{1.902cm}}
\pgfpathcurveto{\pgfqpoint{1.172cm}{1.902cm}}{\pgfqpoint{1.138cm}{1.887cm}}{\pgfqpoint{1.112cm}{1.862cm}}
\pgfpathcurveto{\pgfqpoint{1.087cm}{1.836cm}}{\pgfqpoint{1.072cm}{1.801cm}}{\pgfqpoint{1.072cm}{1.765cm}}
\pgfpathcurveto{\pgfqpoint{1.072cm}{1.728cm}}{\pgfqpoint{1.087cm}{1.694cm}}{\pgfqpoint{1.112cm}{1.668cm}}
\pgfpathcurveto{\pgfqpoint{1.138cm}{1.642cm}}{\pgfqpoint{1.172cm}{1.628cm}}{\pgfqpoint{1.209cm}{1.628cm}}
\pgfpathcurveto{\pgfqpoint{1.245cm}{1.628cm}}{\pgfqpoint{1.28cm}{1.642cm}}{\pgfqpoint{1.305cm}{1.668cm}}
\pgfpathcurveto{\pgfqpoint{1.331cm}{1.694cm}}{\pgfqpoint{1.345cm}{1.728cm}}{\pgfqpoint{1.345cm}{1.765cm}}
\pgfusepath{fill}
\begin{pgfscope}
\pgfsetdash{}{0cm}
\pgfsetlinewidth{0.818mm}
\pgfsetroundcap
\pgfsetroundjoin
\pgfsetmiterlimit{7.0}
\pgfpathmoveto{\pgfqpoint{0.682cm}{1.065cm}}
\pgfpathlineto{\pgfqpoint{1.246cm}{0.315cm}}
\pgfpathlineto{\pgfqpoint{1.811cm}{1.065cm}}
\pgfusepath{stroke}
\end{pgfscope}
\pgfpathmoveto{\pgfqpoint{1.948cm}{1.065cm}}
\pgfpathcurveto{\pgfqpoint{1.948cm}{1.101cm}}{\pgfqpoint{1.933cm}{1.136cm}}{\pgfqpoint{1.907cm}{1.162cm}}
\pgfpathcurveto{\pgfqpoint{1.882cm}{1.187cm}}{\pgfqpoint{1.847cm}{1.202cm}}{\pgfqpoint{1.811cm}{1.202cm}}
\pgfpathcurveto{\pgfqpoint{1.775cm}{1.202cm}}{\pgfqpoint{1.74cm}{1.187cm}}{\pgfqpoint{1.714cm}{1.162cm}}
\pgfpathcurveto{\pgfqpoint{1.689cm}{1.136cm}}{\pgfqpoint{1.674cm}{1.101cm}}{\pgfqpoint{1.674cm}{1.065cm}}
\pgfpathcurveto{\pgfqpoint{1.674cm}{1.029cm}}{\pgfqpoint{1.689cm}{0.994cm}}{\pgfqpoint{1.714cm}{0.968cm}}
\pgfpathcurveto{\pgfqpoint{1.74cm}{0.942cm}}{\pgfqpoint{1.775cm}{0.928cm}}{\pgfqpoint{1.811cm}{0.928cm}}
\pgfpathcurveto{\pgfqpoint{1.847cm}{0.928cm}}{\pgfqpoint{1.882cm}{0.942cm}}{\pgfqpoint{1.907cm}{0.968cm}}
\pgfpathcurveto{\pgfqpoint{1.933cm}{0.994cm}}{\pgfqpoint{1.948cm}{1.029cm}}{\pgfqpoint{1.948cm}{1.065cm}}
\pgfusepath{fill}
\begin{pgfscope}
\pgfsetdash{}{0cm}
\pgfsetlinewidth{0.818mm}
\pgfsetmiterlimit{4.0}
\pgfpathmoveto{\pgfqpoint{1.383cm}{0.178cm}}
\pgfpathcurveto{\pgfqpoint{1.383cm}{0.214cm}}{\pgfqpoint{1.369cm}{0.249cm}}{\pgfqpoint{1.343cm}{0.275cm}}
\pgfpathcurveto{\pgfqpoint{1.317cm}{0.3cm}}{\pgfqpoint{1.283cm}{0.315cm}}{\pgfqpoint{1.246cm}{0.315cm}}
\pgfpathcurveto{\pgfqpoint{1.21cm}{0.315cm}}{\pgfqpoint{1.175cm}{0.3cm}}{\pgfqpoint{1.15cm}{0.275cm}}
\pgfpathcurveto{\pgfqpoint{1.124cm}{0.249cm}}{\pgfqpoint{1.11cm}{0.214cm}}{\pgfqpoint{1.11cm}{0.178cm}}
\pgfpathcurveto{\pgfqpoint{1.11cm}{0.141cm}}{\pgfqpoint{1.124cm}{0.107cm}}{\pgfqpoint{1.15cm}{0.081cm}}
\pgfpathcurveto{\pgfqpoint{1.175cm}{0.055cm}}{\pgfqpoint{1.21cm}{0.041cm}}{\pgfqpoint{1.246cm}{0.041cm}}
\pgfpathcurveto{\pgfqpoint{1.283cm}{0.041cm}}{\pgfqpoint{1.317cm}{0.055cm}}{\pgfqpoint{1.343cm}{0.081cm}}
\pgfpathcurveto{\pgfqpoint{1.369cm}{0.107cm}}{\pgfqpoint{1.383cm}{0.141cm}}{\pgfqpoint{1.383cm}{0.178cm}}
\pgfusepath{stroke}
\end{pgfscope}
\end{pgfscope}
\end{pgfscope}
\end{pgfscope}
\end{tikzpicture}}}, X_{M, \varepsilon}^{\!\resizebox{!}{.8em}{
\begin{tikzpicture}
\pgfpathmoveto{\pgfqpoint{0cm}{-0.035cm}}
\pgfpathlineto{\pgfqpoint{1.976cm}{-0.035cm}}
\pgfpathlineto{\pgfqpoint{1.976cm}{1.94cm}}
\pgfpathlineto{\pgfqpoint{0cm}{1.94cm}}
\pgfpathclose
\pgfusepath{clip}
\begin{pgfscope}
\begin{pgfscope}
\pgfpathmoveto{\pgfqpoint{0cm}{-0.035cm}}
\pgfpathlineto{\pgfqpoint{1.976cm}{-0.035cm}}
\pgfpathlineto{\pgfqpoint{1.976cm}{1.94cm}}
\pgfpathlineto{\pgfqpoint{0cm}{1.94cm}}
\pgfpathclose
\pgfusepath{clip}
\begin{pgfscope}
\begin{pgfscope}
\pgfsetdash{}{0cm}
\pgfsetlinewidth{0.818mm}
\pgfsetroundcap
\pgfsetroundjoin
\pgfsetmiterlimit{7.0}
\definecolor{eps2pgf_color}{gray}{0}\pgfsetstrokecolor{eps2pgf_color}\pgfsetfillcolor{eps2pgf_color}
\pgfpathmoveto{\pgfqpoint{0.117cm}{1.815cm}}
\pgfpathlineto{\pgfqpoint{0.682cm}{1.065cm}}
\pgfpathlineto{\pgfqpoint{1.246cm}{1.815cm}}
\pgfusepath{stroke}
\end{pgfscope}
\definecolor{eps2pgf_color}{gray}{0}\pgfsetstrokecolor{eps2pgf_color}\pgfsetfillcolor{eps2pgf_color}
\pgfpathmoveto{\pgfqpoint{0.273cm}{1.789cm}}
\pgfpathcurveto{\pgfqpoint{0.273cm}{1.825cm}}{\pgfqpoint{0.259cm}{1.86cm}}{\pgfqpoint{0.233cm}{1.886cm}}
\pgfpathcurveto{\pgfqpoint{0.207cm}{1.912cm}}{\pgfqpoint{0.173cm}{1.926cm}}{\pgfqpoint{0.137cm}{1.926cm}}
\pgfpathcurveto{\pgfqpoint{0.1cm}{1.926cm}}{\pgfqpoint{0.066cm}{1.912cm}}{\pgfqpoint{0.04cm}{1.886cm}}
\pgfpathcurveto{\pgfqpoint{0.014cm}{1.86cm}}{\pgfqpoint{0cm}{1.825cm}}{\pgfqpoint{0cm}{1.789cm}}
\pgfpathcurveto{\pgfqpoint{0cm}{1.753cm}}{\pgfqpoint{0.014cm}{1.718cm}}{\pgfqpoint{0.04cm}{1.692cm}}
\pgfpathcurveto{\pgfqpoint{0.066cm}{1.667cm}}{\pgfqpoint{0.1cm}{1.652cm}}{\pgfqpoint{0.137cm}{1.652cm}}
\pgfpathcurveto{\pgfqpoint{0.173cm}{1.652cm}}{\pgfqpoint{0.207cm}{1.667cm}}{\pgfqpoint{0.233cm}{1.692cm}}
\pgfpathcurveto{\pgfqpoint{0.259cm}{1.718cm}}{\pgfqpoint{0.273cm}{1.753cm}}{\pgfqpoint{0.273cm}{1.789cm}}
\pgfusepath{fill}
\pgfpathmoveto{\pgfqpoint{1.345cm}{1.765cm}}
\pgfpathcurveto{\pgfqpoint{1.345cm}{1.801cm}}{\pgfqpoint{1.331cm}{1.836cm}}{\pgfqpoint{1.305cm}{1.862cm}}
\pgfpathcurveto{\pgfqpoint{1.28cm}{1.887cm}}{\pgfqpoint{1.245cm}{1.902cm}}{\pgfqpoint{1.209cm}{1.902cm}}
\pgfpathcurveto{\pgfqpoint{1.172cm}{1.902cm}}{\pgfqpoint{1.138cm}{1.887cm}}{\pgfqpoint{1.112cm}{1.862cm}}
\pgfpathcurveto{\pgfqpoint{1.087cm}{1.836cm}}{\pgfqpoint{1.072cm}{1.801cm}}{\pgfqpoint{1.072cm}{1.765cm}}
\pgfpathcurveto{\pgfqpoint{1.072cm}{1.728cm}}{\pgfqpoint{1.087cm}{1.694cm}}{\pgfqpoint{1.112cm}{1.668cm}}
\pgfpathcurveto{\pgfqpoint{1.138cm}{1.642cm}}{\pgfqpoint{1.172cm}{1.628cm}}{\pgfqpoint{1.209cm}{1.628cm}}
\pgfpathcurveto{\pgfqpoint{1.245cm}{1.628cm}}{\pgfqpoint{1.28cm}{1.642cm}}{\pgfqpoint{1.305cm}{1.668cm}}
\pgfpathcurveto{\pgfqpoint{1.331cm}{1.694cm}}{\pgfqpoint{1.345cm}{1.728cm}}{\pgfqpoint{1.345cm}{1.765cm}}
\pgfusepath{fill}
\begin{pgfscope}
\pgfsetdash{}{0cm}
\pgfsetlinewidth{0.818mm}
\pgfsetroundcap
\pgfsetroundjoin
\pgfsetmiterlimit{7.0}
\pgfpathmoveto{\pgfqpoint{0.682cm}{1.065cm}}
\pgfpathlineto{\pgfqpoint{1.246cm}{0.315cm}}
\pgfpathlineto{\pgfqpoint{1.811cm}{1.065cm}}
\pgfusepath{stroke}
\end{pgfscope}
\pgfpathmoveto{\pgfqpoint{1.948cm}{1.065cm}}
\pgfpathcurveto{\pgfqpoint{1.948cm}{1.101cm}}{\pgfqpoint{1.933cm}{1.136cm}}{\pgfqpoint{1.907cm}{1.162cm}}
\pgfpathcurveto{\pgfqpoint{1.882cm}{1.187cm}}{\pgfqpoint{1.847cm}{1.202cm}}{\pgfqpoint{1.811cm}{1.202cm}}
\pgfpathcurveto{\pgfqpoint{1.775cm}{1.202cm}}{\pgfqpoint{1.74cm}{1.187cm}}{\pgfqpoint{1.714cm}{1.162cm}}
\pgfpathcurveto{\pgfqpoint{1.689cm}{1.136cm}}{\pgfqpoint{1.674cm}{1.101cm}}{\pgfqpoint{1.674cm}{1.065cm}}
\pgfpathcurveto{\pgfqpoint{1.674cm}{1.029cm}}{\pgfqpoint{1.689cm}{0.994cm}}{\pgfqpoint{1.714cm}{0.968cm}}
\pgfpathcurveto{\pgfqpoint{1.74cm}{0.942cm}}{\pgfqpoint{1.775cm}{0.928cm}}{\pgfqpoint{1.811cm}{0.928cm}}
\pgfpathcurveto{\pgfqpoint{1.847cm}{0.928cm}}{\pgfqpoint{1.882cm}{0.942cm}}{\pgfqpoint{1.907cm}{0.968cm}}
\pgfpathcurveto{\pgfqpoint{1.933cm}{0.994cm}}{\pgfqpoint{1.948cm}{1.029cm}}{\pgfqpoint{1.948cm}{1.065cm}}
\pgfusepath{fill}
\begin{pgfscope}
\pgfsetdash{}{0cm}
\pgfsetlinewidth{0.818mm}
\pgfsetmiterlimit{7.0}
\pgfpathmoveto{\pgfqpoint{1.246cm}{0.315cm}}
\pgfpathlineto{\pgfqpoint{1.244cm}{1.061cm}}
\pgfusepath{stroke}
\end{pgfscope}
\pgfpathmoveto{\pgfqpoint{1.38cm}{1.065cm}}
\pgfpathcurveto{\pgfqpoint{1.38cm}{1.101cm}}{\pgfqpoint{1.366cm}{1.136cm}}{\pgfqpoint{1.34cm}{1.162cm}}
\pgfpathcurveto{\pgfqpoint{1.315cm}{1.187cm}}{\pgfqpoint{1.28cm}{1.202cm}}{\pgfqpoint{1.244cm}{1.202cm}}
\pgfpathcurveto{\pgfqpoint{1.207cm}{1.202cm}}{\pgfqpoint{1.173cm}{1.187cm}}{\pgfqpoint{1.147cm}{1.162cm}}
\pgfpathcurveto{\pgfqpoint{1.121cm}{1.136cm}}{\pgfqpoint{1.107cm}{1.101cm}}{\pgfqpoint{1.107cm}{1.065cm}}
\pgfpathcurveto{\pgfqpoint{1.107cm}{1.029cm}}{\pgfqpoint{1.121cm}{0.994cm}}{\pgfqpoint{1.147cm}{0.968cm}}
\pgfpathcurveto{\pgfqpoint{1.173cm}{0.942cm}}{\pgfqpoint{1.207cm}{0.928cm}}{\pgfqpoint{1.244cm}{0.928cm}}
\pgfpathcurveto{\pgfqpoint{1.28cm}{0.928cm}}{\pgfqpoint{1.315cm}{0.942cm}}{\pgfqpoint{1.34cm}{0.968cm}}
\pgfpathcurveto{\pgfqpoint{1.366cm}{0.994cm}}{\pgfqpoint{1.38cm}{1.029cm}}{\pgfqpoint{1.38cm}{1.065cm}}
\pgfusepath{fill}
\begin{pgfscope}
\pgfsetdash{}{0cm}
\pgfsetlinewidth{0.818mm}
\pgfsetmiterlimit{4.0}
\pgfpathmoveto{\pgfqpoint{1.383cm}{0.178cm}}
\pgfpathcurveto{\pgfqpoint{1.383cm}{0.214cm}}{\pgfqpoint{1.369cm}{0.249cm}}{\pgfqpoint{1.343cm}{0.275cm}}
\pgfpathcurveto{\pgfqpoint{1.317cm}{0.3cm}}{\pgfqpoint{1.283cm}{0.315cm}}{\pgfqpoint{1.246cm}{0.315cm}}
\pgfpathcurveto{\pgfqpoint{1.21cm}{0.315cm}}{\pgfqpoint{1.175cm}{0.3cm}}{\pgfqpoint{1.15cm}{0.275cm}}
\pgfpathcurveto{\pgfqpoint{1.124cm}{0.249cm}}{\pgfqpoint{1.11cm}{0.214cm}}{\pgfqpoint{1.11cm}{0.178cm}}
\pgfpathcurveto{\pgfqpoint{1.11cm}{0.141cm}}{\pgfqpoint{1.124cm}{0.107cm}}{\pgfqpoint{1.15cm}{0.081cm}}
\pgfpathcurveto{\pgfqpoint{1.175cm}{0.055cm}}{\pgfqpoint{1.21cm}{0.041cm}}{\pgfqpoint{1.246cm}{0.041cm}}
\pgfpathcurveto{\pgfqpoint{1.283cm}{0.041cm}}{\pgfqpoint{1.317cm}{0.055cm}}{\pgfqpoint{1.343cm}{0.081cm}}
\pgfpathcurveto{\pgfqpoint{1.369cm}{0.107cm}}{\pgfqpoint{1.383cm}{0.141cm}}{\pgfqpoint{1.383cm}{0.178cm}}
\pgfusepath{stroke}
\end{pgfscope}
\end{pgfscope}
\end{pgfscope}
\end{pgfscope}
\end{tikzpicture}}},
   \tilde{X}_{M, \varepsilon}^{\!\resizebox{!}{.8em}{
\begin{tikzpicture}
\pgfpathmoveto{\pgfqpoint{0cm}{-0.035cm}}
\pgfpathlineto{\pgfqpoint{1.976cm}{-0.035cm}}
\pgfpathlineto{\pgfqpoint{1.976cm}{1.94cm}}
\pgfpathlineto{\pgfqpoint{0cm}{1.94cm}}
\pgfpathclose
\pgfusepath{clip}
\begin{pgfscope}
\begin{pgfscope}
\pgfpathmoveto{\pgfqpoint{0cm}{-0.035cm}}
\pgfpathlineto{\pgfqpoint{1.976cm}{-0.035cm}}
\pgfpathlineto{\pgfqpoint{1.976cm}{1.94cm}}
\pgfpathlineto{\pgfqpoint{0cm}{1.94cm}}
\pgfpathclose
\pgfusepath{clip}
\begin{pgfscope}
\begin{pgfscope}
\pgfsetdash{}{0cm}
\pgfsetlinewidth{0.818mm}
\pgfsetroundcap
\pgfsetroundjoin
\pgfsetmiterlimit{7.0}
\definecolor{eps2pgf_color}{gray}{0}\pgfsetstrokecolor{eps2pgf_color}\pgfsetfillcolor{eps2pgf_color}
\pgfpathmoveto{\pgfqpoint{0.117cm}{1.815cm}}
\pgfpathlineto{\pgfqpoint{0.682cm}{1.065cm}}
\pgfpathlineto{\pgfqpoint{1.246cm}{1.815cm}}
\pgfusepath{stroke}
\end{pgfscope}
\definecolor{eps2pgf_color}{gray}{0}\pgfsetstrokecolor{eps2pgf_color}\pgfsetfillcolor{eps2pgf_color}
\pgfpathmoveto{\pgfqpoint{0.273cm}{1.789cm}}
\pgfpathcurveto{\pgfqpoint{0.273cm}{1.825cm}}{\pgfqpoint{0.259cm}{1.86cm}}{\pgfqpoint{0.233cm}{1.886cm}}
\pgfpathcurveto{\pgfqpoint{0.207cm}{1.912cm}}{\pgfqpoint{0.173cm}{1.926cm}}{\pgfqpoint{0.137cm}{1.926cm}}
\pgfpathcurveto{\pgfqpoint{0.1cm}{1.926cm}}{\pgfqpoint{0.066cm}{1.912cm}}{\pgfqpoint{0.04cm}{1.886cm}}
\pgfpathcurveto{\pgfqpoint{0.014cm}{1.86cm}}{\pgfqpoint{0cm}{1.825cm}}{\pgfqpoint{0cm}{1.789cm}}
\pgfpathcurveto{\pgfqpoint{0cm}{1.753cm}}{\pgfqpoint{0.014cm}{1.718cm}}{\pgfqpoint{0.04cm}{1.692cm}}
\pgfpathcurveto{\pgfqpoint{0.066cm}{1.667cm}}{\pgfqpoint{0.1cm}{1.652cm}}{\pgfqpoint{0.137cm}{1.652cm}}
\pgfpathcurveto{\pgfqpoint{0.173cm}{1.652cm}}{\pgfqpoint{0.207cm}{1.667cm}}{\pgfqpoint{0.233cm}{1.692cm}}
\pgfpathcurveto{\pgfqpoint{0.259cm}{1.718cm}}{\pgfqpoint{0.273cm}{1.753cm}}{\pgfqpoint{0.273cm}{1.789cm}}
\pgfusepath{fill}
\pgfpathmoveto{\pgfqpoint{1.345cm}{1.765cm}}
\pgfpathcurveto{\pgfqpoint{1.345cm}{1.801cm}}{\pgfqpoint{1.331cm}{1.836cm}}{\pgfqpoint{1.305cm}{1.862cm}}
\pgfpathcurveto{\pgfqpoint{1.28cm}{1.887cm}}{\pgfqpoint{1.245cm}{1.902cm}}{\pgfqpoint{1.209cm}{1.902cm}}
\pgfpathcurveto{\pgfqpoint{1.172cm}{1.902cm}}{\pgfqpoint{1.138cm}{1.887cm}}{\pgfqpoint{1.112cm}{1.862cm}}
\pgfpathcurveto{\pgfqpoint{1.087cm}{1.836cm}}{\pgfqpoint{1.072cm}{1.801cm}}{\pgfqpoint{1.072cm}{1.765cm}}
\pgfpathcurveto{\pgfqpoint{1.072cm}{1.728cm}}{\pgfqpoint{1.087cm}{1.694cm}}{\pgfqpoint{1.112cm}{1.668cm}}
\pgfpathcurveto{\pgfqpoint{1.138cm}{1.642cm}}{\pgfqpoint{1.172cm}{1.628cm}}{\pgfqpoint{1.209cm}{1.628cm}}
\pgfpathcurveto{\pgfqpoint{1.245cm}{1.628cm}}{\pgfqpoint{1.28cm}{1.642cm}}{\pgfqpoint{1.305cm}{1.668cm}}
\pgfpathcurveto{\pgfqpoint{1.331cm}{1.694cm}}{\pgfqpoint{1.345cm}{1.728cm}}{\pgfqpoint{1.345cm}{1.765cm}}
\pgfusepath{fill}
\begin{pgfscope}
\pgfsetdash{}{0cm}
\pgfsetlinewidth{0.818mm}
\pgfsetroundcap
\pgfsetroundjoin
\pgfsetmiterlimit{7.0}
\pgfpathmoveto{\pgfqpoint{0.682cm}{1.065cm}}
\pgfpathlineto{\pgfqpoint{1.246cm}{0.315cm}}
\pgfpathlineto{\pgfqpoint{1.811cm}{1.065cm}}
\pgfusepath{stroke}
\end{pgfscope}
\pgfpathmoveto{\pgfqpoint{1.948cm}{1.065cm}}
\pgfpathcurveto{\pgfqpoint{1.948cm}{1.101cm}}{\pgfqpoint{1.933cm}{1.136cm}}{\pgfqpoint{1.907cm}{1.162cm}}
\pgfpathcurveto{\pgfqpoint{1.882cm}{1.187cm}}{\pgfqpoint{1.847cm}{1.202cm}}{\pgfqpoint{1.811cm}{1.202cm}}
\pgfpathcurveto{\pgfqpoint{1.775cm}{1.202cm}}{\pgfqpoint{1.74cm}{1.187cm}}{\pgfqpoint{1.714cm}{1.162cm}}
\pgfpathcurveto{\pgfqpoint{1.689cm}{1.136cm}}{\pgfqpoint{1.674cm}{1.101cm}}{\pgfqpoint{1.674cm}{1.065cm}}
\pgfpathcurveto{\pgfqpoint{1.674cm}{1.029cm}}{\pgfqpoint{1.689cm}{0.994cm}}{\pgfqpoint{1.714cm}{0.968cm}}
\pgfpathcurveto{\pgfqpoint{1.74cm}{0.942cm}}{\pgfqpoint{1.775cm}{0.928cm}}{\pgfqpoint{1.811cm}{0.928cm}}
\pgfpathcurveto{\pgfqpoint{1.847cm}{0.928cm}}{\pgfqpoint{1.882cm}{0.942cm}}{\pgfqpoint{1.907cm}{0.968cm}}
\pgfpathcurveto{\pgfqpoint{1.933cm}{0.994cm}}{\pgfqpoint{1.948cm}{1.029cm}}{\pgfqpoint{1.948cm}{1.065cm}}
\pgfusepath{fill}
\begin{pgfscope}
\pgfsetdash{}{0cm}
\pgfsetlinewidth{0.818mm}
\pgfsetmiterlimit{7.0}
\pgfpathmoveto{\pgfqpoint{1.246cm}{0.315cm}}
\pgfpathlineto{\pgfqpoint{1.244cm}{1.061cm}}
\pgfusepath{stroke}
\end{pgfscope}
\pgfpathmoveto{\pgfqpoint{1.38cm}{1.065cm}}
\pgfpathcurveto{\pgfqpoint{1.38cm}{1.101cm}}{\pgfqpoint{1.366cm}{1.136cm}}{\pgfqpoint{1.34cm}{1.162cm}}
\pgfpathcurveto{\pgfqpoint{1.315cm}{1.187cm}}{\pgfqpoint{1.28cm}{1.202cm}}{\pgfqpoint{1.244cm}{1.202cm}}
\pgfpathcurveto{\pgfqpoint{1.207cm}{1.202cm}}{\pgfqpoint{1.173cm}{1.187cm}}{\pgfqpoint{1.147cm}{1.162cm}}
\pgfpathcurveto{\pgfqpoint{1.121cm}{1.136cm}}{\pgfqpoint{1.107cm}{1.101cm}}{\pgfqpoint{1.107cm}{1.065cm}}
\pgfpathcurveto{\pgfqpoint{1.107cm}{1.029cm}}{\pgfqpoint{1.121cm}{0.994cm}}{\pgfqpoint{1.147cm}{0.968cm}}
\pgfpathcurveto{\pgfqpoint{1.173cm}{0.942cm}}{\pgfqpoint{1.207cm}{0.928cm}}{\pgfqpoint{1.244cm}{0.928cm}}
\pgfpathcurveto{\pgfqpoint{1.28cm}{0.928cm}}{\pgfqpoint{1.315cm}{0.942cm}}{\pgfqpoint{1.34cm}{0.968cm}}
\pgfpathcurveto{\pgfqpoint{1.366cm}{0.994cm}}{\pgfqpoint{1.38cm}{1.029cm}}{\pgfqpoint{1.38cm}{1.065cm}}
\pgfusepath{fill}
\begin{pgfscope}
\pgfsetdash{}{0cm}
\pgfsetlinewidth{0.818mm}
\pgfsetmiterlimit{4.0}
\pgfpathmoveto{\pgfqpoint{1.383cm}{0.178cm}}
\pgfpathcurveto{\pgfqpoint{1.383cm}{0.214cm}}{\pgfqpoint{1.369cm}{0.249cm}}{\pgfqpoint{1.343cm}{0.275cm}}
\pgfpathcurveto{\pgfqpoint{1.317cm}{0.3cm}}{\pgfqpoint{1.283cm}{0.315cm}}{\pgfqpoint{1.246cm}{0.315cm}}
\pgfpathcurveto{\pgfqpoint{1.21cm}{0.315cm}}{\pgfqpoint{1.175cm}{0.3cm}}{\pgfqpoint{1.15cm}{0.275cm}}
\pgfpathcurveto{\pgfqpoint{1.124cm}{0.249cm}}{\pgfqpoint{1.11cm}{0.214cm}}{\pgfqpoint{1.11cm}{0.178cm}}
\pgfpathcurveto{\pgfqpoint{1.11cm}{0.141cm}}{\pgfqpoint{1.124cm}{0.107cm}}{\pgfqpoint{1.15cm}{0.081cm}}
\pgfpathcurveto{\pgfqpoint{1.175cm}{0.055cm}}{\pgfqpoint{1.21cm}{0.041cm}}{\pgfqpoint{1.246cm}{0.041cm}}
\pgfpathcurveto{\pgfqpoint{1.283cm}{0.041cm}}{\pgfqpoint{1.317cm}{0.055cm}}{\pgfqpoint{1.343cm}{0.081cm}}
\pgfpathcurveto{\pgfqpoint{1.369cm}{0.107cm}}{\pgfqpoint{1.383cm}{0.141cm}}{\pgfqpoint{1.383cm}{0.178cm}}
\pgfusepath{stroke}
\end{pgfscope}
\end{pgfscope}
\end{pgfscope}
\end{pgfscope}
\end{tikzpicture}}}, X_{M, \varepsilon}^{\!\resizebox{!}{.8em}{
\begin{tikzpicture}
\pgfpathmoveto{\pgfqpoint{0cm}{-0.035cm}}
\pgfpathlineto{\pgfqpoint{1.976cm}{-0.035cm}}
\pgfpathlineto{\pgfqpoint{1.976cm}{1.94cm}}
\pgfpathlineto{\pgfqpoint{0cm}{1.94cm}}
\pgfpathclose
\pgfusepath{clip}
\begin{pgfscope}
\begin{pgfscope}
\pgfpathmoveto{\pgfqpoint{0cm}{-0.035cm}}
\pgfpathlineto{\pgfqpoint{1.976cm}{-0.035cm}}
\pgfpathlineto{\pgfqpoint{1.976cm}{1.94cm}}
\pgfpathlineto{\pgfqpoint{0cm}{1.94cm}}
\pgfpathclose
\pgfusepath{clip}
\begin{pgfscope}
\begin{pgfscope}
\pgfsetdash{}{0cm}
\pgfsetlinewidth{0.818mm}
\pgfsetroundcap
\pgfsetroundjoin
\pgfsetmiterlimit{7.0}
\definecolor{eps2pgf_color}{gray}{0}\pgfsetstrokecolor{eps2pgf_color}\pgfsetfillcolor{eps2pgf_color}
\pgfpathmoveto{\pgfqpoint{0.117cm}{1.815cm}}
\pgfpathlineto{\pgfqpoint{0.682cm}{1.065cm}}
\pgfpathlineto{\pgfqpoint{1.246cm}{1.815cm}}
\pgfusepath{stroke}
\end{pgfscope}
\definecolor{eps2pgf_color}{gray}{0}\pgfsetstrokecolor{eps2pgf_color}\pgfsetfillcolor{eps2pgf_color}
\pgfpathmoveto{\pgfqpoint{0.273cm}{1.789cm}}
\pgfpathcurveto{\pgfqpoint{0.273cm}{1.825cm}}{\pgfqpoint{0.259cm}{1.86cm}}{\pgfqpoint{0.233cm}{1.886cm}}
\pgfpathcurveto{\pgfqpoint{0.207cm}{1.912cm}}{\pgfqpoint{0.173cm}{1.926cm}}{\pgfqpoint{0.137cm}{1.926cm}}
\pgfpathcurveto{\pgfqpoint{0.1cm}{1.926cm}}{\pgfqpoint{0.066cm}{1.912cm}}{\pgfqpoint{0.04cm}{1.886cm}}
\pgfpathcurveto{\pgfqpoint{0.014cm}{1.86cm}}{\pgfqpoint{0cm}{1.825cm}}{\pgfqpoint{0cm}{1.789cm}}
\pgfpathcurveto{\pgfqpoint{0cm}{1.753cm}}{\pgfqpoint{0.014cm}{1.718cm}}{\pgfqpoint{0.04cm}{1.692cm}}
\pgfpathcurveto{\pgfqpoint{0.066cm}{1.667cm}}{\pgfqpoint{0.1cm}{1.652cm}}{\pgfqpoint{0.137cm}{1.652cm}}
\pgfpathcurveto{\pgfqpoint{0.173cm}{1.652cm}}{\pgfqpoint{0.207cm}{1.667cm}}{\pgfqpoint{0.233cm}{1.692cm}}
\pgfpathcurveto{\pgfqpoint{0.259cm}{1.718cm}}{\pgfqpoint{0.273cm}{1.753cm}}{\pgfqpoint{0.273cm}{1.789cm}}
\pgfusepath{fill}
\begin{pgfscope}
\pgfsetdash{}{0cm}
\pgfsetlinewidth{0.818mm}
\pgfsetmiterlimit{7.0}
\pgfpathmoveto{\pgfqpoint{0.682cm}{1.065cm}}
\pgfpathlineto{\pgfqpoint{0.679cm}{1.812cm}}
\pgfusepath{stroke}
\end{pgfscope}
\pgfpathmoveto{\pgfqpoint{0.815cm}{1.793cm}}
\pgfpathcurveto{\pgfqpoint{0.815cm}{1.829cm}}{\pgfqpoint{0.801cm}{1.864cm}}{\pgfqpoint{0.775cm}{1.89cm}}
\pgfpathcurveto{\pgfqpoint{0.75cm}{1.915cm}}{\pgfqpoint{0.715cm}{1.93cm}}{\pgfqpoint{0.679cm}{1.93cm}}
\pgfpathcurveto{\pgfqpoint{0.643cm}{1.93cm}}{\pgfqpoint{0.608cm}{1.915cm}}{\pgfqpoint{0.582cm}{1.89cm}}
\pgfpathcurveto{\pgfqpoint{0.557cm}{1.864cm}}{\pgfqpoint{0.542cm}{1.829cm}}{\pgfqpoint{0.542cm}{1.793cm}}
\pgfpathcurveto{\pgfqpoint{0.542cm}{1.756cm}}{\pgfqpoint{0.557cm}{1.722cm}}{\pgfqpoint{0.582cm}{1.696cm}}
\pgfpathcurveto{\pgfqpoint{0.608cm}{1.67cm}}{\pgfqpoint{0.643cm}{1.656cm}}{\pgfqpoint{0.679cm}{1.656cm}}
\pgfpathcurveto{\pgfqpoint{0.715cm}{1.656cm}}{\pgfqpoint{0.75cm}{1.67cm}}{\pgfqpoint{0.775cm}{1.696cm}}
\pgfpathcurveto{\pgfqpoint{0.801cm}{1.722cm}}{\pgfqpoint{0.815cm}{1.756cm}}{\pgfqpoint{0.815cm}{1.793cm}}
\pgfusepath{fill}
\pgfpathmoveto{\pgfqpoint{1.345cm}{1.765cm}}
\pgfpathcurveto{\pgfqpoint{1.345cm}{1.801cm}}{\pgfqpoint{1.331cm}{1.836cm}}{\pgfqpoint{1.305cm}{1.862cm}}
\pgfpathcurveto{\pgfqpoint{1.28cm}{1.887cm}}{\pgfqpoint{1.245cm}{1.902cm}}{\pgfqpoint{1.209cm}{1.902cm}}
\pgfpathcurveto{\pgfqpoint{1.172cm}{1.902cm}}{\pgfqpoint{1.138cm}{1.887cm}}{\pgfqpoint{1.112cm}{1.862cm}}
\pgfpathcurveto{\pgfqpoint{1.087cm}{1.836cm}}{\pgfqpoint{1.072cm}{1.801cm}}{\pgfqpoint{1.072cm}{1.765cm}}
\pgfpathcurveto{\pgfqpoint{1.072cm}{1.728cm}}{\pgfqpoint{1.087cm}{1.694cm}}{\pgfqpoint{1.112cm}{1.668cm}}
\pgfpathcurveto{\pgfqpoint{1.138cm}{1.642cm}}{\pgfqpoint{1.172cm}{1.628cm}}{\pgfqpoint{1.209cm}{1.628cm}}
\pgfpathcurveto{\pgfqpoint{1.245cm}{1.628cm}}{\pgfqpoint{1.28cm}{1.642cm}}{\pgfqpoint{1.305cm}{1.668cm}}
\pgfpathcurveto{\pgfqpoint{1.331cm}{1.694cm}}{\pgfqpoint{1.345cm}{1.728cm}}{\pgfqpoint{1.345cm}{1.765cm}}
\pgfusepath{fill}
\begin{pgfscope}
\pgfsetdash{}{0cm}
\pgfsetlinewidth{0.818mm}
\pgfsetroundcap
\pgfsetroundjoin
\pgfsetmiterlimit{7.0}
\pgfpathmoveto{\pgfqpoint{0.682cm}{1.065cm}}
\pgfpathlineto{\pgfqpoint{1.246cm}{0.315cm}}
\pgfpathlineto{\pgfqpoint{1.811cm}{1.065cm}}
\pgfusepath{stroke}
\end{pgfscope}
\pgfpathmoveto{\pgfqpoint{1.948cm}{1.065cm}}
\pgfpathcurveto{\pgfqpoint{1.948cm}{1.101cm}}{\pgfqpoint{1.933cm}{1.136cm}}{\pgfqpoint{1.907cm}{1.162cm}}
\pgfpathcurveto{\pgfqpoint{1.882cm}{1.187cm}}{\pgfqpoint{1.847cm}{1.202cm}}{\pgfqpoint{1.811cm}{1.202cm}}
\pgfpathcurveto{\pgfqpoint{1.775cm}{1.202cm}}{\pgfqpoint{1.74cm}{1.187cm}}{\pgfqpoint{1.714cm}{1.162cm}}
\pgfpathcurveto{\pgfqpoint{1.689cm}{1.136cm}}{\pgfqpoint{1.674cm}{1.101cm}}{\pgfqpoint{1.674cm}{1.065cm}}
\pgfpathcurveto{\pgfqpoint{1.674cm}{1.029cm}}{\pgfqpoint{1.689cm}{0.994cm}}{\pgfqpoint{1.714cm}{0.968cm}}
\pgfpathcurveto{\pgfqpoint{1.74cm}{0.942cm}}{\pgfqpoint{1.775cm}{0.928cm}}{\pgfqpoint{1.811cm}{0.928cm}}
\pgfpathcurveto{\pgfqpoint{1.847cm}{0.928cm}}{\pgfqpoint{1.882cm}{0.942cm}}{\pgfqpoint{1.907cm}{0.968cm}}
\pgfpathcurveto{\pgfqpoint{1.933cm}{0.994cm}}{\pgfqpoint{1.948cm}{1.029cm}}{\pgfqpoint{1.948cm}{1.065cm}}
\pgfusepath{fill}
\begin{pgfscope}
\pgfsetdash{}{0cm}
\pgfsetlinewidth{0.818mm}
\pgfsetmiterlimit{7.0}
\pgfpathmoveto{\pgfqpoint{1.246cm}{0.315cm}}
\pgfpathlineto{\pgfqpoint{1.244cm}{1.061cm}}
\pgfusepath{stroke}
\end{pgfscope}
\pgfpathmoveto{\pgfqpoint{1.38cm}{1.065cm}}
\pgfpathcurveto{\pgfqpoint{1.38cm}{1.101cm}}{\pgfqpoint{1.366cm}{1.136cm}}{\pgfqpoint{1.34cm}{1.162cm}}
\pgfpathcurveto{\pgfqpoint{1.315cm}{1.187cm}}{\pgfqpoint{1.28cm}{1.202cm}}{\pgfqpoint{1.244cm}{1.202cm}}
\pgfpathcurveto{\pgfqpoint{1.207cm}{1.202cm}}{\pgfqpoint{1.173cm}{1.187cm}}{\pgfqpoint{1.147cm}{1.162cm}}
\pgfpathcurveto{\pgfqpoint{1.121cm}{1.136cm}}{\pgfqpoint{1.107cm}{1.101cm}}{\pgfqpoint{1.107cm}{1.065cm}}
\pgfpathcurveto{\pgfqpoint{1.107cm}{1.029cm}}{\pgfqpoint{1.121cm}{0.994cm}}{\pgfqpoint{1.147cm}{0.968cm}}
\pgfpathcurveto{\pgfqpoint{1.173cm}{0.942cm}}{\pgfqpoint{1.207cm}{0.928cm}}{\pgfqpoint{1.244cm}{0.928cm}}
\pgfpathcurveto{\pgfqpoint{1.28cm}{0.928cm}}{\pgfqpoint{1.315cm}{0.942cm}}{\pgfqpoint{1.34cm}{0.968cm}}
\pgfpathcurveto{\pgfqpoint{1.366cm}{0.994cm}}{\pgfqpoint{1.38cm}{1.029cm}}{\pgfqpoint{1.38cm}{1.065cm}}
\pgfusepath{fill}
\begin{pgfscope}
\pgfsetdash{}{0cm}
\pgfsetlinewidth{0.818mm}
\pgfsetmiterlimit{4.0}
\pgfpathmoveto{\pgfqpoint{1.383cm}{0.178cm}}
\pgfpathcurveto{\pgfqpoint{1.383cm}{0.214cm}}{\pgfqpoint{1.369cm}{0.249cm}}{\pgfqpoint{1.343cm}{0.275cm}}
\pgfpathcurveto{\pgfqpoint{1.317cm}{0.3cm}}{\pgfqpoint{1.283cm}{0.315cm}}{\pgfqpoint{1.246cm}{0.315cm}}
\pgfpathcurveto{\pgfqpoint{1.21cm}{0.315cm}}{\pgfqpoint{1.175cm}{0.3cm}}{\pgfqpoint{1.15cm}{0.275cm}}
\pgfpathcurveto{\pgfqpoint{1.124cm}{0.249cm}}{\pgfqpoint{1.11cm}{0.214cm}}{\pgfqpoint{1.11cm}{0.178cm}}
\pgfpathcurveto{\pgfqpoint{1.11cm}{0.141cm}}{\pgfqpoint{1.124cm}{0.107cm}}{\pgfqpoint{1.15cm}{0.081cm}}
\pgfpathcurveto{\pgfqpoint{1.175cm}{0.055cm}}{\pgfqpoint{1.21cm}{0.041cm}}{\pgfqpoint{1.246cm}{0.041cm}}
\pgfpathcurveto{\pgfqpoint{1.283cm}{0.041cm}}{\pgfqpoint{1.317cm}{0.055cm}}{\pgfqpoint{1.343cm}{0.081cm}}
\pgfpathcurveto{\pgfqpoint{1.369cm}{0.107cm}}{\pgfqpoint{1.383cm}{0.141cm}}{\pgfqpoint{1.383cm}{0.178cm}}
\pgfusepath{stroke}
\end{pgfscope}
\end{pgfscope}
\end{pgfscope}
\end{pgfscope}
\end{tikzpicture}}})
   .
   \end{equation}
   These objects can be constructed similarly as the usual $\Phi^{4}_{3}$ terms, see e.g. \cite{GH18,hairer_regularity_2015,mourrat_construction_2016}.
Note that we do not include $X_{M, \varepsilon}^{\!\resizebox{0.6em}{!}{
\begin{tikzpicture}
\pgfpathmoveto{\pgfqpoint{0cm}{0cm}}
\pgfpathlineto{\pgfqpoint{1.376cm}{0cm}}
\pgfpathlineto{\pgfqpoint{1.376cm}{1.588cm}}
\pgfpathlineto{\pgfqpoint{0cm}{1.588cm}}
\pgfpathclose
\pgfusepath{clip}
\begin{pgfscope}
\begin{pgfscope}
\pgfpathmoveto{\pgfqpoint{0cm}{0cm}}
\pgfpathlineto{\pgfqpoint{1.376cm}{0cm}}
\pgfpathlineto{\pgfqpoint{1.376cm}{1.588cm}}
\pgfpathlineto{\pgfqpoint{0cm}{1.588cm}}
\pgfpathclose
\pgfusepath{clip}
\begin{pgfscope}
\begin{pgfscope}
\definecolor{eps2pgf_color}{gray}{0.976471}\pgfsetstrokecolor{eps2pgf_color}\pgfsetfillcolor{eps2pgf_color}
\pgfpathmoveto{\pgfqpoint{0cm}{0cm}}
\pgfpathlineto{\pgfqpoint{1.376cm}{0cm}}
\pgfpathlineto{\pgfqpoint{1.376cm}{1.588cm}}
\pgfpathlineto{\pgfqpoint{0cm}{1.588cm}}
\pgfpathclose
\pgfusepath{fill}
\end{pgfscope}
\begin{pgfscope}
\pgfsetdash{}{0cm}
\pgfsetlinewidth{0.818mm}
\pgfsetroundcap
\pgfsetroundjoin
\pgfsetmiterlimit{7.0}
\definecolor{eps2pgf_color}{gray}{0}\pgfsetstrokecolor{eps2pgf_color}\pgfsetfillcolor{eps2pgf_color}
\pgfpathmoveto{\pgfqpoint{0.117cm}{1.476cm}}
\pgfpathlineto{\pgfqpoint{0.682cm}{0.726cm}}
\pgfpathlineto{\pgfqpoint{1.246cm}{1.476cm}}
\pgfusepath{stroke}
\end{pgfscope}
\definecolor{eps2pgf_color}{gray}{0}\pgfsetstrokecolor{eps2pgf_color}\pgfsetfillcolor{eps2pgf_color}
\pgfpathmoveto{\pgfqpoint{0.273cm}{1.451cm}}
\pgfpathcurveto{\pgfqpoint{0.273cm}{1.487cm}}{\pgfqpoint{0.259cm}{1.522cm}}{\pgfqpoint{0.233cm}{1.547cm}}
\pgfpathcurveto{\pgfqpoint{0.207cm}{1.573cm}}{\pgfqpoint{0.173cm}{1.588cm}}{\pgfqpoint{0.137cm}{1.588cm}}
\pgfpathcurveto{\pgfqpoint{0.1cm}{1.588cm}}{\pgfqpoint{0.066cm}{1.573cm}}{\pgfqpoint{0.04cm}{1.547cm}}
\pgfpathcurveto{\pgfqpoint{0.014cm}{1.522cm}}{\pgfqpoint{0cm}{1.487cm}}{\pgfqpoint{0cm}{1.451cm}}
\pgfpathcurveto{\pgfqpoint{0cm}{1.414cm}}{\pgfqpoint{0.014cm}{1.379cm}}{\pgfqpoint{0.04cm}{1.354cm}}
\pgfpathcurveto{\pgfqpoint{0.066cm}{1.328cm}}{\pgfqpoint{0.1cm}{1.314cm}}{\pgfqpoint{0.137cm}{1.314cm}}
\pgfpathcurveto{\pgfqpoint{0.173cm}{1.314cm}}{\pgfqpoint{0.207cm}{1.328cm}}{\pgfqpoint{0.233cm}{1.354cm}}
\pgfpathcurveto{\pgfqpoint{0.259cm}{1.379cm}}{\pgfqpoint{0.273cm}{1.414cm}}{\pgfqpoint{0.273cm}{1.451cm}}
\pgfusepath{fill}
\pgfpathmoveto{\pgfqpoint{1.345cm}{1.426cm}}
\pgfpathcurveto{\pgfqpoint{1.345cm}{1.463cm}}{\pgfqpoint{1.331cm}{1.497cm}}{\pgfqpoint{1.305cm}{1.523cm}}
\pgfpathcurveto{\pgfqpoint{1.28cm}{1.549cm}}{\pgfqpoint{1.245cm}{1.563cm}}{\pgfqpoint{1.209cm}{1.563cm}}
\pgfpathcurveto{\pgfqpoint{1.172cm}{1.563cm}}{\pgfqpoint{1.138cm}{1.549cm}}{\pgfqpoint{1.112cm}{1.523cm}}
\pgfpathcurveto{\pgfqpoint{1.087cm}{1.497cm}}{\pgfqpoint{1.072cm}{1.463cm}}{\pgfqpoint{1.072cm}{1.426cm}}
\pgfpathcurveto{\pgfqpoint{1.072cm}{1.39cm}}{\pgfqpoint{1.087cm}{1.355cm}}{\pgfqpoint{1.112cm}{1.329cm}}
\pgfpathcurveto{\pgfqpoint{1.138cm}{1.304cm}}{\pgfqpoint{1.172cm}{1.289cm}}{\pgfqpoint{1.209cm}{1.289cm}}
\pgfpathcurveto{\pgfqpoint{1.245cm}{1.289cm}}{\pgfqpoint{1.28cm}{1.304cm}}{\pgfqpoint{1.305cm}{1.329cm}}
\pgfpathcurveto{\pgfqpoint{1.331cm}{1.355cm}}{\pgfqpoint{1.345cm}{1.39cm}}{\pgfqpoint{1.345cm}{1.426cm}}
\pgfusepath{fill}
\begin{pgfscope}
\pgfsetdash{}{0cm}
\pgfsetlinewidth{0.818mm}
\pgfsetroundcap
\pgfsetmiterlimit{4.0}
\pgfpathmoveto{\pgfqpoint{0.682cm}{0.726cm}}
\pgfpathlineto{\pgfqpoint{0.682cm}{0.097cm}}
\pgfusepath{stroke}
\end{pgfscope}
\end{pgfscope}
\end{pgfscope}
\end{pgfscope}
\end{tikzpicture}}}$ in $\mathbb{X}_{M,
\varepsilon}$ since it can be controlled by $\llbracket X_{M, \varepsilon}^2
\rrbracket$ using Schauder estimates. 
In order to have a precise control of the number of copies of $X$ appearing in each  stochastic term we define $\|\mathbb X_{M,\varepsilon}\|$ as the smallest number bigger than 1 and all the quantities
\begin{equation}\label{eq:XX1}
 \| X_{M, \varepsilon} \|_{C_T \CC^{- 1 / 2 - \kappa, \varepsilon}
   (\rho^{\sigma})}, \quad \| \llbracket X_{M, \varepsilon}^2 \rrbracket
   \|^{1/2}_{C_T \CC^{- 1 - \kappa, \varepsilon} (\rho^{\sigma})}, \quad \|
   X_{M, \varepsilon}^{\!\resizebox{0.6em}{!}{
\begin{tikzpicture}
\pgfpathmoveto{\pgfqpoint{0cm}{-0.035cm}}
\pgfpathlineto{\pgfqpoint{1.376cm}{-0.035cm}}
\pgfpathlineto{\pgfqpoint{1.376cm}{1.552cm}}
\pgfpathlineto{\pgfqpoint{0cm}{1.552cm}}
\pgfpathclose
\pgfusepath{clip}
\begin{pgfscope}
\begin{pgfscope}
\pgfpathmoveto{\pgfqpoint{0cm}{-0.035cm}}
\pgfpathlineto{\pgfqpoint{1.376cm}{-0.035cm}}
\pgfpathlineto{\pgfqpoint{1.376cm}{1.552cm}}
\pgfpathlineto{\pgfqpoint{0cm}{1.552cm}}
\pgfpathclose
\pgfusepath{clip}
\begin{pgfscope}
\begin{pgfscope}
\pgfsetdash{}{0cm}
\pgfsetlinewidth{0.818mm}
\pgfsetroundcap
\pgfsetroundjoin
\pgfsetmiterlimit{7.0}
\definecolor{eps2pgf_color}{gray}{0}\pgfsetstrokecolor{eps2pgf_color}\pgfsetfillcolor{eps2pgf_color}
\pgfpathmoveto{\pgfqpoint{0.117cm}{1.421cm}}
\pgfpathlineto{\pgfqpoint{0.682cm}{0.671cm}}
\pgfpathlineto{\pgfqpoint{1.246cm}{1.421cm}}
\pgfusepath{stroke}
\end{pgfscope}
\definecolor{eps2pgf_color}{gray}{0}\pgfsetstrokecolor{eps2pgf_color}\pgfsetfillcolor{eps2pgf_color}
\pgfpathmoveto{\pgfqpoint{0.273cm}{1.395cm}}
\pgfpathcurveto{\pgfqpoint{0.273cm}{1.432cm}}{\pgfqpoint{0.259cm}{1.467cm}}{\pgfqpoint{0.233cm}{1.492cm}}
\pgfpathcurveto{\pgfqpoint{0.207cm}{1.518cm}}{\pgfqpoint{0.173cm}{1.532cm}}{\pgfqpoint{0.137cm}{1.532cm}}
\pgfpathcurveto{\pgfqpoint{0.1cm}{1.532cm}}{\pgfqpoint{0.066cm}{1.518cm}}{\pgfqpoint{0.04cm}{1.492cm}}
\pgfpathcurveto{\pgfqpoint{0.014cm}{1.467cm}}{\pgfqpoint{0cm}{1.432cm}}{\pgfqpoint{0cm}{1.395cm}}
\pgfpathcurveto{\pgfqpoint{0cm}{1.359cm}}{\pgfqpoint{0.014cm}{1.324cm}}{\pgfqpoint{0.04cm}{1.299cm}}
\pgfpathcurveto{\pgfqpoint{0.066cm}{1.273cm}}{\pgfqpoint{0.1cm}{1.258cm}}{\pgfqpoint{0.137cm}{1.258cm}}
\pgfpathcurveto{\pgfqpoint{0.173cm}{1.258cm}}{\pgfqpoint{0.207cm}{1.273cm}}{\pgfqpoint{0.233cm}{1.299cm}}
\pgfpathcurveto{\pgfqpoint{0.259cm}{1.324cm}}{\pgfqpoint{0.273cm}{1.359cm}}{\pgfqpoint{0.273cm}{1.395cm}}
\pgfusepath{fill}
\begin{pgfscope}
\pgfsetdash{}{0cm}
\pgfsetlinewidth{0.818mm}
\pgfsetmiterlimit{7.0}
\pgfpathmoveto{\pgfqpoint{0.682cm}{0.671cm}}
\pgfpathlineto{\pgfqpoint{0.679cm}{1.418cm}}
\pgfusepath{stroke}
\end{pgfscope}
\pgfpathmoveto{\pgfqpoint{0.815cm}{1.399cm}}
\pgfpathcurveto{\pgfqpoint{0.815cm}{1.435cm}}{\pgfqpoint{0.801cm}{1.47cm}}{\pgfqpoint{0.775cm}{1.496cm}}
\pgfpathcurveto{\pgfqpoint{0.75cm}{1.521cm}}{\pgfqpoint{0.715cm}{1.536cm}}{\pgfqpoint{0.679cm}{1.536cm}}
\pgfpathcurveto{\pgfqpoint{0.643cm}{1.536cm}}{\pgfqpoint{0.608cm}{1.521cm}}{\pgfqpoint{0.582cm}{1.496cm}}
\pgfpathcurveto{\pgfqpoint{0.557cm}{1.47cm}}{\pgfqpoint{0.542cm}{1.435cm}}{\pgfqpoint{0.542cm}{1.399cm}}
\pgfpathcurveto{\pgfqpoint{0.542cm}{1.363cm}}{\pgfqpoint{0.557cm}{1.328cm}}{\pgfqpoint{0.582cm}{1.302cm}}
\pgfpathcurveto{\pgfqpoint{0.608cm}{1.276cm}}{\pgfqpoint{0.643cm}{1.262cm}}{\pgfqpoint{0.679cm}{1.262cm}}
\pgfpathcurveto{\pgfqpoint{0.715cm}{1.262cm}}{\pgfqpoint{0.75cm}{1.276cm}}{\pgfqpoint{0.775cm}{1.302cm}}
\pgfpathcurveto{\pgfqpoint{0.801cm}{1.328cm}}{\pgfqpoint{0.815cm}{1.363cm}}{\pgfqpoint{0.815cm}{1.399cm}}
\pgfusepath{fill}
\pgfpathmoveto{\pgfqpoint{1.345cm}{1.371cm}}
\pgfpathcurveto{\pgfqpoint{1.345cm}{1.408cm}}{\pgfqpoint{1.331cm}{1.442cm}}{\pgfqpoint{1.305cm}{1.468cm}}
\pgfpathcurveto{\pgfqpoint{1.28cm}{1.494cm}}{\pgfqpoint{1.245cm}{1.508cm}}{\pgfqpoint{1.209cm}{1.508cm}}
\pgfpathcurveto{\pgfqpoint{1.172cm}{1.508cm}}{\pgfqpoint{1.138cm}{1.494cm}}{\pgfqpoint{1.112cm}{1.468cm}}
\pgfpathcurveto{\pgfqpoint{1.087cm}{1.442cm}}{\pgfqpoint{1.072cm}{1.408cm}}{\pgfqpoint{1.072cm}{1.371cm}}
\pgfpathcurveto{\pgfqpoint{1.072cm}{1.335cm}}{\pgfqpoint{1.087cm}{1.3cm}}{\pgfqpoint{1.112cm}{1.274cm}}
\pgfpathcurveto{\pgfqpoint{1.138cm}{1.249cm}}{\pgfqpoint{1.172cm}{1.234cm}}{\pgfqpoint{1.209cm}{1.234cm}}
\pgfpathcurveto{\pgfqpoint{1.245cm}{1.234cm}}{\pgfqpoint{1.28cm}{1.249cm}}{\pgfqpoint{1.305cm}{1.274cm}}
\pgfpathcurveto{\pgfqpoint{1.331cm}{1.3cm}}{\pgfqpoint{1.345cm}{1.335cm}}{\pgfqpoint{1.345cm}{1.371cm}}
\pgfusepath{fill}
\begin{pgfscope}
\pgfsetdash{}{0cm}
\pgfsetlinewidth{0.818mm}
\pgfsetroundcap
\pgfsetmiterlimit{4.0}
\pgfpathmoveto{\pgfqpoint{0.682cm}{0.671cm}}
\pgfpathlineto{\pgfqpoint{0.682cm}{0.042cm}}
\pgfusepath{stroke}
\end{pgfscope}
\end{pgfscope}
\end{pgfscope}
\end{pgfscope}
\end{tikzpicture}}} \|^{1/3}_{C_T \CC^{1 / 2 - \kappa,
   \varepsilon} (\rho^{\sigma})},
   \end{equation}
\begin{equation}\label{eq:XX2}
 \quad \| X_{M, \varepsilon}^{\!\resizebox{0.6em}{!}{
\begin{tikzpicture}
\pgfpathmoveto{\pgfqpoint{0cm}{-0.035cm}}
\pgfpathlineto{\pgfqpoint{1.376cm}{-0.035cm}}
\pgfpathlineto{\pgfqpoint{1.376cm}{1.552cm}}
\pgfpathlineto{\pgfqpoint{0cm}{1.552cm}}
\pgfpathclose
\pgfusepath{clip}
\begin{pgfscope}
\begin{pgfscope}
\pgfpathmoveto{\pgfqpoint{0cm}{-0.035cm}}
\pgfpathlineto{\pgfqpoint{1.376cm}{-0.035cm}}
\pgfpathlineto{\pgfqpoint{1.376cm}{1.552cm}}
\pgfpathlineto{\pgfqpoint{0cm}{1.552cm}}
\pgfpathclose
\pgfusepath{clip}
\begin{pgfscope}
\begin{pgfscope}
\pgfsetdash{}{0cm}
\pgfsetlinewidth{0.818mm}
\pgfsetroundcap
\pgfsetroundjoin
\pgfsetmiterlimit{7.0}
\definecolor{eps2pgf_color}{gray}{0}\pgfsetstrokecolor{eps2pgf_color}\pgfsetfillcolor{eps2pgf_color}
\pgfpathmoveto{\pgfqpoint{0.117cm}{1.421cm}}
\pgfpathlineto{\pgfqpoint{0.682cm}{0.671cm}}
\pgfpathlineto{\pgfqpoint{1.246cm}{1.421cm}}
\pgfusepath{stroke}
\end{pgfscope}
\definecolor{eps2pgf_color}{gray}{0}\pgfsetstrokecolor{eps2pgf_color}\pgfsetfillcolor{eps2pgf_color}
\pgfpathmoveto{\pgfqpoint{0.273cm}{1.395cm}}
\pgfpathcurveto{\pgfqpoint{0.273cm}{1.432cm}}{\pgfqpoint{0.259cm}{1.467cm}}{\pgfqpoint{0.233cm}{1.492cm}}
\pgfpathcurveto{\pgfqpoint{0.207cm}{1.518cm}}{\pgfqpoint{0.173cm}{1.532cm}}{\pgfqpoint{0.137cm}{1.532cm}}
\pgfpathcurveto{\pgfqpoint{0.1cm}{1.532cm}}{\pgfqpoint{0.066cm}{1.518cm}}{\pgfqpoint{0.04cm}{1.492cm}}
\pgfpathcurveto{\pgfqpoint{0.014cm}{1.467cm}}{\pgfqpoint{0cm}{1.432cm}}{\pgfqpoint{0cm}{1.395cm}}
\pgfpathcurveto{\pgfqpoint{0cm}{1.359cm}}{\pgfqpoint{0.014cm}{1.324cm}}{\pgfqpoint{0.04cm}{1.299cm}}
\pgfpathcurveto{\pgfqpoint{0.066cm}{1.273cm}}{\pgfqpoint{0.1cm}{1.258cm}}{\pgfqpoint{0.137cm}{1.258cm}}
\pgfpathcurveto{\pgfqpoint{0.173cm}{1.258cm}}{\pgfqpoint{0.207cm}{1.273cm}}{\pgfqpoint{0.233cm}{1.299cm}}
\pgfpathcurveto{\pgfqpoint{0.259cm}{1.324cm}}{\pgfqpoint{0.273cm}{1.359cm}}{\pgfqpoint{0.273cm}{1.395cm}}
\pgfusepath{fill}
\begin{pgfscope}
\pgfsetdash{}{0cm}
\pgfsetlinewidth{0.818mm}
\pgfsetmiterlimit{7.0}
\pgfpathmoveto{\pgfqpoint{0.682cm}{0.671cm}}
\pgfpathlineto{\pgfqpoint{0.679cm}{1.418cm}}
\pgfusepath{stroke}
\end{pgfscope}
\pgfpathmoveto{\pgfqpoint{0.815cm}{1.399cm}}
\pgfpathcurveto{\pgfqpoint{0.815cm}{1.435cm}}{\pgfqpoint{0.801cm}{1.47cm}}{\pgfqpoint{0.775cm}{1.496cm}}
\pgfpathcurveto{\pgfqpoint{0.75cm}{1.521cm}}{\pgfqpoint{0.715cm}{1.536cm}}{\pgfqpoint{0.679cm}{1.536cm}}
\pgfpathcurveto{\pgfqpoint{0.643cm}{1.536cm}}{\pgfqpoint{0.608cm}{1.521cm}}{\pgfqpoint{0.582cm}{1.496cm}}
\pgfpathcurveto{\pgfqpoint{0.557cm}{1.47cm}}{\pgfqpoint{0.542cm}{1.435cm}}{\pgfqpoint{0.542cm}{1.399cm}}
\pgfpathcurveto{\pgfqpoint{0.542cm}{1.363cm}}{\pgfqpoint{0.557cm}{1.328cm}}{\pgfqpoint{0.582cm}{1.302cm}}
\pgfpathcurveto{\pgfqpoint{0.608cm}{1.276cm}}{\pgfqpoint{0.643cm}{1.262cm}}{\pgfqpoint{0.679cm}{1.262cm}}
\pgfpathcurveto{\pgfqpoint{0.715cm}{1.262cm}}{\pgfqpoint{0.75cm}{1.276cm}}{\pgfqpoint{0.775cm}{1.302cm}}
\pgfpathcurveto{\pgfqpoint{0.801cm}{1.328cm}}{\pgfqpoint{0.815cm}{1.363cm}}{\pgfqpoint{0.815cm}{1.399cm}}
\pgfusepath{fill}
\pgfpathmoveto{\pgfqpoint{1.345cm}{1.371cm}}
\pgfpathcurveto{\pgfqpoint{1.345cm}{1.408cm}}{\pgfqpoint{1.331cm}{1.442cm}}{\pgfqpoint{1.305cm}{1.468cm}}
\pgfpathcurveto{\pgfqpoint{1.28cm}{1.494cm}}{\pgfqpoint{1.245cm}{1.508cm}}{\pgfqpoint{1.209cm}{1.508cm}}
\pgfpathcurveto{\pgfqpoint{1.172cm}{1.508cm}}{\pgfqpoint{1.138cm}{1.494cm}}{\pgfqpoint{1.112cm}{1.468cm}}
\pgfpathcurveto{\pgfqpoint{1.087cm}{1.442cm}}{\pgfqpoint{1.072cm}{1.408cm}}{\pgfqpoint{1.072cm}{1.371cm}}
\pgfpathcurveto{\pgfqpoint{1.072cm}{1.335cm}}{\pgfqpoint{1.087cm}{1.3cm}}{\pgfqpoint{1.112cm}{1.274cm}}
\pgfpathcurveto{\pgfqpoint{1.138cm}{1.249cm}}{\pgfqpoint{1.172cm}{1.234cm}}{\pgfqpoint{1.209cm}{1.234cm}}
\pgfpathcurveto{\pgfqpoint{1.245cm}{1.234cm}}{\pgfqpoint{1.28cm}{1.249cm}}{\pgfqpoint{1.305cm}{1.274cm}}
\pgfpathcurveto{\pgfqpoint{1.331cm}{1.3cm}}{\pgfqpoint{1.345cm}{1.335cm}}{\pgfqpoint{1.345cm}{1.371cm}}
\pgfusepath{fill}
\begin{pgfscope}
\pgfsetdash{}{0cm}
\pgfsetlinewidth{0.818mm}
\pgfsetroundcap
\pgfsetmiterlimit{4.0}
\pgfpathmoveto{\pgfqpoint{0.682cm}{0.671cm}}
\pgfpathlineto{\pgfqpoint{0.682cm}{0.042cm}}
\pgfusepath{stroke}
\end{pgfscope}
\end{pgfscope}
\end{pgfscope}
\end{pgfscope}
\end{tikzpicture}}} \|^{1/3}_{C_{T}^{\beta / 2}
   L^{\infty, \varepsilon} (\rho^{\sigma})}, \qquad \| X_{M,
   \varepsilon}^{\!\resizebox{!}{.8em}{
\begin{tikzpicture}
\pgfpathmoveto{\pgfqpoint{0cm}{-0.035cm}}
\pgfpathlineto{\pgfqpoint{1.976cm}{-0.035cm}}
\pgfpathlineto{\pgfqpoint{1.976cm}{1.94cm}}
\pgfpathlineto{\pgfqpoint{0cm}{1.94cm}}
\pgfpathclose
\pgfusepath{clip}
\begin{pgfscope}
\begin{pgfscope}
\pgfpathmoveto{\pgfqpoint{0cm}{-0.035cm}}
\pgfpathlineto{\pgfqpoint{1.976cm}{-0.035cm}}
\pgfpathlineto{\pgfqpoint{1.976cm}{1.94cm}}
\pgfpathlineto{\pgfqpoint{0cm}{1.94cm}}
\pgfpathclose
\pgfusepath{clip}
\begin{pgfscope}
\begin{pgfscope}
\pgfsetdash{}{0cm}
\pgfsetlinewidth{0.818mm}
\pgfsetroundcap
\pgfsetroundjoin
\pgfsetmiterlimit{7.0}
\definecolor{eps2pgf_color}{gray}{0}\pgfsetstrokecolor{eps2pgf_color}\pgfsetfillcolor{eps2pgf_color}
\pgfpathmoveto{\pgfqpoint{0.117cm}{1.815cm}}
\pgfpathlineto{\pgfqpoint{0.682cm}{1.065cm}}
\pgfpathlineto{\pgfqpoint{1.246cm}{1.815cm}}
\pgfusepath{stroke}
\end{pgfscope}
\definecolor{eps2pgf_color}{gray}{0}\pgfsetstrokecolor{eps2pgf_color}\pgfsetfillcolor{eps2pgf_color}
\pgfpathmoveto{\pgfqpoint{0.273cm}{1.789cm}}
\pgfpathcurveto{\pgfqpoint{0.273cm}{1.825cm}}{\pgfqpoint{0.259cm}{1.86cm}}{\pgfqpoint{0.233cm}{1.886cm}}
\pgfpathcurveto{\pgfqpoint{0.207cm}{1.912cm}}{\pgfqpoint{0.173cm}{1.926cm}}{\pgfqpoint{0.137cm}{1.926cm}}
\pgfpathcurveto{\pgfqpoint{0.1cm}{1.926cm}}{\pgfqpoint{0.066cm}{1.912cm}}{\pgfqpoint{0.04cm}{1.886cm}}
\pgfpathcurveto{\pgfqpoint{0.014cm}{1.86cm}}{\pgfqpoint{0cm}{1.825cm}}{\pgfqpoint{0cm}{1.789cm}}
\pgfpathcurveto{\pgfqpoint{0cm}{1.753cm}}{\pgfqpoint{0.014cm}{1.718cm}}{\pgfqpoint{0.04cm}{1.692cm}}
\pgfpathcurveto{\pgfqpoint{0.066cm}{1.667cm}}{\pgfqpoint{0.1cm}{1.652cm}}{\pgfqpoint{0.137cm}{1.652cm}}
\pgfpathcurveto{\pgfqpoint{0.173cm}{1.652cm}}{\pgfqpoint{0.207cm}{1.667cm}}{\pgfqpoint{0.233cm}{1.692cm}}
\pgfpathcurveto{\pgfqpoint{0.259cm}{1.718cm}}{\pgfqpoint{0.273cm}{1.753cm}}{\pgfqpoint{0.273cm}{1.789cm}}
\pgfusepath{fill}
\begin{pgfscope}
\pgfsetdash{}{0cm}
\pgfsetlinewidth{0.818mm}
\pgfsetmiterlimit{7.0}
\pgfpathmoveto{\pgfqpoint{0.682cm}{1.065cm}}
\pgfpathlineto{\pgfqpoint{0.679cm}{1.812cm}}
\pgfusepath{stroke}
\end{pgfscope}
\pgfpathmoveto{\pgfqpoint{0.815cm}{1.793cm}}
\pgfpathcurveto{\pgfqpoint{0.815cm}{1.829cm}}{\pgfqpoint{0.801cm}{1.864cm}}{\pgfqpoint{0.775cm}{1.89cm}}
\pgfpathcurveto{\pgfqpoint{0.75cm}{1.915cm}}{\pgfqpoint{0.715cm}{1.93cm}}{\pgfqpoint{0.679cm}{1.93cm}}
\pgfpathcurveto{\pgfqpoint{0.643cm}{1.93cm}}{\pgfqpoint{0.608cm}{1.915cm}}{\pgfqpoint{0.582cm}{1.89cm}}
\pgfpathcurveto{\pgfqpoint{0.557cm}{1.864cm}}{\pgfqpoint{0.542cm}{1.829cm}}{\pgfqpoint{0.542cm}{1.793cm}}
\pgfpathcurveto{\pgfqpoint{0.542cm}{1.756cm}}{\pgfqpoint{0.557cm}{1.722cm}}{\pgfqpoint{0.582cm}{1.696cm}}
\pgfpathcurveto{\pgfqpoint{0.608cm}{1.67cm}}{\pgfqpoint{0.643cm}{1.656cm}}{\pgfqpoint{0.679cm}{1.656cm}}
\pgfpathcurveto{\pgfqpoint{0.715cm}{1.656cm}}{\pgfqpoint{0.75cm}{1.67cm}}{\pgfqpoint{0.775cm}{1.696cm}}
\pgfpathcurveto{\pgfqpoint{0.801cm}{1.722cm}}{\pgfqpoint{0.815cm}{1.756cm}}{\pgfqpoint{0.815cm}{1.793cm}}
\pgfusepath{fill}
\pgfpathmoveto{\pgfqpoint{1.345cm}{1.765cm}}
\pgfpathcurveto{\pgfqpoint{1.345cm}{1.801cm}}{\pgfqpoint{1.331cm}{1.836cm}}{\pgfqpoint{1.305cm}{1.862cm}}
\pgfpathcurveto{\pgfqpoint{1.28cm}{1.887cm}}{\pgfqpoint{1.245cm}{1.902cm}}{\pgfqpoint{1.209cm}{1.902cm}}
\pgfpathcurveto{\pgfqpoint{1.172cm}{1.902cm}}{\pgfqpoint{1.138cm}{1.887cm}}{\pgfqpoint{1.112cm}{1.862cm}}
\pgfpathcurveto{\pgfqpoint{1.087cm}{1.836cm}}{\pgfqpoint{1.072cm}{1.801cm}}{\pgfqpoint{1.072cm}{1.765cm}}
\pgfpathcurveto{\pgfqpoint{1.072cm}{1.728cm}}{\pgfqpoint{1.087cm}{1.694cm}}{\pgfqpoint{1.112cm}{1.668cm}}
\pgfpathcurveto{\pgfqpoint{1.138cm}{1.642cm}}{\pgfqpoint{1.172cm}{1.628cm}}{\pgfqpoint{1.209cm}{1.628cm}}
\pgfpathcurveto{\pgfqpoint{1.245cm}{1.628cm}}{\pgfqpoint{1.28cm}{1.642cm}}{\pgfqpoint{1.305cm}{1.668cm}}
\pgfpathcurveto{\pgfqpoint{1.331cm}{1.694cm}}{\pgfqpoint{1.345cm}{1.728cm}}{\pgfqpoint{1.345cm}{1.765cm}}
\pgfusepath{fill}
\begin{pgfscope}
\pgfsetdash{}{0cm}
\pgfsetlinewidth{0.818mm}
\pgfsetroundcap
\pgfsetroundjoin
\pgfsetmiterlimit{7.0}
\pgfpathmoveto{\pgfqpoint{0.682cm}{1.065cm}}
\pgfpathlineto{\pgfqpoint{1.246cm}{0.315cm}}
\pgfpathlineto{\pgfqpoint{1.811cm}{1.065cm}}
\pgfusepath{stroke}
\end{pgfscope}
\pgfpathmoveto{\pgfqpoint{1.948cm}{1.065cm}}
\pgfpathcurveto{\pgfqpoint{1.948cm}{1.101cm}}{\pgfqpoint{1.933cm}{1.136cm}}{\pgfqpoint{1.907cm}{1.162cm}}
\pgfpathcurveto{\pgfqpoint{1.882cm}{1.187cm}}{\pgfqpoint{1.847cm}{1.202cm}}{\pgfqpoint{1.811cm}{1.202cm}}
\pgfpathcurveto{\pgfqpoint{1.775cm}{1.202cm}}{\pgfqpoint{1.74cm}{1.187cm}}{\pgfqpoint{1.714cm}{1.162cm}}
\pgfpathcurveto{\pgfqpoint{1.689cm}{1.136cm}}{\pgfqpoint{1.674cm}{1.101cm}}{\pgfqpoint{1.674cm}{1.065cm}}
\pgfpathcurveto{\pgfqpoint{1.674cm}{1.029cm}}{\pgfqpoint{1.689cm}{0.994cm}}{\pgfqpoint{1.714cm}{0.968cm}}
\pgfpathcurveto{\pgfqpoint{1.74cm}{0.942cm}}{\pgfqpoint{1.775cm}{0.928cm}}{\pgfqpoint{1.811cm}{0.928cm}}
\pgfpathcurveto{\pgfqpoint{1.847cm}{0.928cm}}{\pgfqpoint{1.882cm}{0.942cm}}{\pgfqpoint{1.907cm}{0.968cm}}
\pgfpathcurveto{\pgfqpoint{1.933cm}{0.994cm}}{\pgfqpoint{1.948cm}{1.029cm}}{\pgfqpoint{1.948cm}{1.065cm}}
\pgfusepath{fill}
\begin{pgfscope}
\pgfsetdash{}{0cm}
\pgfsetlinewidth{0.818mm}
\pgfsetmiterlimit{4.0}
\pgfpathmoveto{\pgfqpoint{1.383cm}{0.178cm}}
\pgfpathcurveto{\pgfqpoint{1.383cm}{0.214cm}}{\pgfqpoint{1.369cm}{0.249cm}}{\pgfqpoint{1.343cm}{0.275cm}}
\pgfpathcurveto{\pgfqpoint{1.317cm}{0.3cm}}{\pgfqpoint{1.283cm}{0.315cm}}{\pgfqpoint{1.246cm}{0.315cm}}
\pgfpathcurveto{\pgfqpoint{1.21cm}{0.315cm}}{\pgfqpoint{1.175cm}{0.3cm}}{\pgfqpoint{1.15cm}{0.275cm}}
\pgfpathcurveto{\pgfqpoint{1.124cm}{0.249cm}}{\pgfqpoint{1.11cm}{0.214cm}}{\pgfqpoint{1.11cm}{0.178cm}}
\pgfpathcurveto{\pgfqpoint{1.11cm}{0.141cm}}{\pgfqpoint{1.124cm}{0.107cm}}{\pgfqpoint{1.15cm}{0.081cm}}
\pgfpathcurveto{\pgfqpoint{1.175cm}{0.055cm}}{\pgfqpoint{1.21cm}{0.041cm}}{\pgfqpoint{1.246cm}{0.041cm}}
\pgfpathcurveto{\pgfqpoint{1.283cm}{0.041cm}}{\pgfqpoint{1.317cm}{0.055cm}}{\pgfqpoint{1.343cm}{0.081cm}}
\pgfpathcurveto{\pgfqpoint{1.369cm}{0.107cm}}{\pgfqpoint{1.383cm}{0.141cm}}{\pgfqpoint{1.383cm}{0.178cm}}
\pgfusepath{stroke}
\end{pgfscope}
\end{pgfscope}
\end{pgfscope}
\end{pgfscope}
\end{tikzpicture}}} \|^{1/4}_{C_T \CC^{- \kappa, \varepsilon} (\rho^{\sigma})},
\end{equation}
\begin{equation}\label{eq:XX3}
 \| X_{M, \varepsilon}^{\!\resizebox{!}{.8em}{
\begin{tikzpicture}
\pgfpathmoveto{\pgfqpoint{0cm}{-0.035cm}}
\pgfpathlineto{\pgfqpoint{1.976cm}{-0.035cm}}
\pgfpathlineto{\pgfqpoint{1.976cm}{1.94cm}}
\pgfpathlineto{\pgfqpoint{0cm}{1.94cm}}
\pgfpathclose
\pgfusepath{clip}
\begin{pgfscope}
\begin{pgfscope}
\pgfpathmoveto{\pgfqpoint{0cm}{-0.035cm}}
\pgfpathlineto{\pgfqpoint{1.976cm}{-0.035cm}}
\pgfpathlineto{\pgfqpoint{1.976cm}{1.94cm}}
\pgfpathlineto{\pgfqpoint{0cm}{1.94cm}}
\pgfpathclose
\pgfusepath{clip}
\begin{pgfscope}
\begin{pgfscope}
\pgfsetdash{}{0cm}
\pgfsetlinewidth{0.818mm}
\pgfsetroundcap
\pgfsetroundjoin
\pgfsetmiterlimit{7.0}
\definecolor{eps2pgf_color}{gray}{0}\pgfsetstrokecolor{eps2pgf_color}\pgfsetfillcolor{eps2pgf_color}
\pgfpathmoveto{\pgfqpoint{0.117cm}{1.815cm}}
\pgfpathlineto{\pgfqpoint{0.682cm}{1.065cm}}
\pgfpathlineto{\pgfqpoint{1.246cm}{1.815cm}}
\pgfusepath{stroke}
\end{pgfscope}
\definecolor{eps2pgf_color}{gray}{0}\pgfsetstrokecolor{eps2pgf_color}\pgfsetfillcolor{eps2pgf_color}
\pgfpathmoveto{\pgfqpoint{0.273cm}{1.789cm}}
\pgfpathcurveto{\pgfqpoint{0.273cm}{1.825cm}}{\pgfqpoint{0.259cm}{1.86cm}}{\pgfqpoint{0.233cm}{1.886cm}}
\pgfpathcurveto{\pgfqpoint{0.207cm}{1.912cm}}{\pgfqpoint{0.173cm}{1.926cm}}{\pgfqpoint{0.137cm}{1.926cm}}
\pgfpathcurveto{\pgfqpoint{0.1cm}{1.926cm}}{\pgfqpoint{0.066cm}{1.912cm}}{\pgfqpoint{0.04cm}{1.886cm}}
\pgfpathcurveto{\pgfqpoint{0.014cm}{1.86cm}}{\pgfqpoint{0cm}{1.825cm}}{\pgfqpoint{0cm}{1.789cm}}
\pgfpathcurveto{\pgfqpoint{0cm}{1.753cm}}{\pgfqpoint{0.014cm}{1.718cm}}{\pgfqpoint{0.04cm}{1.692cm}}
\pgfpathcurveto{\pgfqpoint{0.066cm}{1.667cm}}{\pgfqpoint{0.1cm}{1.652cm}}{\pgfqpoint{0.137cm}{1.652cm}}
\pgfpathcurveto{\pgfqpoint{0.173cm}{1.652cm}}{\pgfqpoint{0.207cm}{1.667cm}}{\pgfqpoint{0.233cm}{1.692cm}}
\pgfpathcurveto{\pgfqpoint{0.259cm}{1.718cm}}{\pgfqpoint{0.273cm}{1.753cm}}{\pgfqpoint{0.273cm}{1.789cm}}
\pgfusepath{fill}
\pgfpathmoveto{\pgfqpoint{1.345cm}{1.765cm}}
\pgfpathcurveto{\pgfqpoint{1.345cm}{1.801cm}}{\pgfqpoint{1.331cm}{1.836cm}}{\pgfqpoint{1.305cm}{1.862cm}}
\pgfpathcurveto{\pgfqpoint{1.28cm}{1.887cm}}{\pgfqpoint{1.245cm}{1.902cm}}{\pgfqpoint{1.209cm}{1.902cm}}
\pgfpathcurveto{\pgfqpoint{1.172cm}{1.902cm}}{\pgfqpoint{1.138cm}{1.887cm}}{\pgfqpoint{1.112cm}{1.862cm}}
\pgfpathcurveto{\pgfqpoint{1.087cm}{1.836cm}}{\pgfqpoint{1.072cm}{1.801cm}}{\pgfqpoint{1.072cm}{1.765cm}}
\pgfpathcurveto{\pgfqpoint{1.072cm}{1.728cm}}{\pgfqpoint{1.087cm}{1.694cm}}{\pgfqpoint{1.112cm}{1.668cm}}
\pgfpathcurveto{\pgfqpoint{1.138cm}{1.642cm}}{\pgfqpoint{1.172cm}{1.628cm}}{\pgfqpoint{1.209cm}{1.628cm}}
\pgfpathcurveto{\pgfqpoint{1.245cm}{1.628cm}}{\pgfqpoint{1.28cm}{1.642cm}}{\pgfqpoint{1.305cm}{1.668cm}}
\pgfpathcurveto{\pgfqpoint{1.331cm}{1.694cm}}{\pgfqpoint{1.345cm}{1.728cm}}{\pgfqpoint{1.345cm}{1.765cm}}
\pgfusepath{fill}
\begin{pgfscope}
\pgfsetdash{}{0cm}
\pgfsetlinewidth{0.818mm}
\pgfsetroundcap
\pgfsetroundjoin
\pgfsetmiterlimit{7.0}
\pgfpathmoveto{\pgfqpoint{0.682cm}{1.065cm}}
\pgfpathlineto{\pgfqpoint{1.246cm}{0.315cm}}
\pgfpathlineto{\pgfqpoint{1.811cm}{1.065cm}}
\pgfusepath{stroke}
\end{pgfscope}
\pgfpathmoveto{\pgfqpoint{1.948cm}{1.065cm}}
\pgfpathcurveto{\pgfqpoint{1.948cm}{1.101cm}}{\pgfqpoint{1.933cm}{1.136cm}}{\pgfqpoint{1.907cm}{1.162cm}}
\pgfpathcurveto{\pgfqpoint{1.882cm}{1.187cm}}{\pgfqpoint{1.847cm}{1.202cm}}{\pgfqpoint{1.811cm}{1.202cm}}
\pgfpathcurveto{\pgfqpoint{1.775cm}{1.202cm}}{\pgfqpoint{1.74cm}{1.187cm}}{\pgfqpoint{1.714cm}{1.162cm}}
\pgfpathcurveto{\pgfqpoint{1.689cm}{1.136cm}}{\pgfqpoint{1.674cm}{1.101cm}}{\pgfqpoint{1.674cm}{1.065cm}}
\pgfpathcurveto{\pgfqpoint{1.674cm}{1.029cm}}{\pgfqpoint{1.689cm}{0.994cm}}{\pgfqpoint{1.714cm}{0.968cm}}
\pgfpathcurveto{\pgfqpoint{1.74cm}{0.942cm}}{\pgfqpoint{1.775cm}{0.928cm}}{\pgfqpoint{1.811cm}{0.928cm}}
\pgfpathcurveto{\pgfqpoint{1.847cm}{0.928cm}}{\pgfqpoint{1.882cm}{0.942cm}}{\pgfqpoint{1.907cm}{0.968cm}}
\pgfpathcurveto{\pgfqpoint{1.933cm}{0.994cm}}{\pgfqpoint{1.948cm}{1.029cm}}{\pgfqpoint{1.948cm}{1.065cm}}
\pgfusepath{fill}
\begin{pgfscope}
\pgfsetdash{}{0cm}
\pgfsetlinewidth{0.818mm}
\pgfsetmiterlimit{7.0}
\pgfpathmoveto{\pgfqpoint{1.246cm}{0.315cm}}
\pgfpathlineto{\pgfqpoint{1.244cm}{1.061cm}}
\pgfusepath{stroke}
\end{pgfscope}
\pgfpathmoveto{\pgfqpoint{1.38cm}{1.065cm}}
\pgfpathcurveto{\pgfqpoint{1.38cm}{1.101cm}}{\pgfqpoint{1.366cm}{1.136cm}}{\pgfqpoint{1.34cm}{1.162cm}}
\pgfpathcurveto{\pgfqpoint{1.315cm}{1.187cm}}{\pgfqpoint{1.28cm}{1.202cm}}{\pgfqpoint{1.244cm}{1.202cm}}
\pgfpathcurveto{\pgfqpoint{1.207cm}{1.202cm}}{\pgfqpoint{1.173cm}{1.187cm}}{\pgfqpoint{1.147cm}{1.162cm}}
\pgfpathcurveto{\pgfqpoint{1.121cm}{1.136cm}}{\pgfqpoint{1.107cm}{1.101cm}}{\pgfqpoint{1.107cm}{1.065cm}}
\pgfpathcurveto{\pgfqpoint{1.107cm}{1.029cm}}{\pgfqpoint{1.121cm}{0.994cm}}{\pgfqpoint{1.147cm}{0.968cm}}
\pgfpathcurveto{\pgfqpoint{1.173cm}{0.942cm}}{\pgfqpoint{1.207cm}{0.928cm}}{\pgfqpoint{1.244cm}{0.928cm}}
\pgfpathcurveto{\pgfqpoint{1.28cm}{0.928cm}}{\pgfqpoint{1.315cm}{0.942cm}}{\pgfqpoint{1.34cm}{0.968cm}}
\pgfpathcurveto{\pgfqpoint{1.366cm}{0.994cm}}{\pgfqpoint{1.38cm}{1.029cm}}{\pgfqpoint{1.38cm}{1.065cm}}
\pgfusepath{fill}
\begin{pgfscope}
\pgfsetdash{}{0cm}
\pgfsetlinewidth{0.818mm}
\pgfsetmiterlimit{4.0}
\pgfpathmoveto{\pgfqpoint{1.383cm}{0.178cm}}
\pgfpathcurveto{\pgfqpoint{1.383cm}{0.214cm}}{\pgfqpoint{1.369cm}{0.249cm}}{\pgfqpoint{1.343cm}{0.275cm}}
\pgfpathcurveto{\pgfqpoint{1.317cm}{0.3cm}}{\pgfqpoint{1.283cm}{0.315cm}}{\pgfqpoint{1.246cm}{0.315cm}}
\pgfpathcurveto{\pgfqpoint{1.21cm}{0.315cm}}{\pgfqpoint{1.175cm}{0.3cm}}{\pgfqpoint{1.15cm}{0.275cm}}
\pgfpathcurveto{\pgfqpoint{1.124cm}{0.249cm}}{\pgfqpoint{1.11cm}{0.214cm}}{\pgfqpoint{1.11cm}{0.178cm}}
\pgfpathcurveto{\pgfqpoint{1.11cm}{0.141cm}}{\pgfqpoint{1.124cm}{0.107cm}}{\pgfqpoint{1.15cm}{0.081cm}}
\pgfpathcurveto{\pgfqpoint{1.175cm}{0.055cm}}{\pgfqpoint{1.21cm}{0.041cm}}{\pgfqpoint{1.246cm}{0.041cm}}
\pgfpathcurveto{\pgfqpoint{1.283cm}{0.041cm}}{\pgfqpoint{1.317cm}{0.055cm}}{\pgfqpoint{1.343cm}{0.081cm}}
\pgfpathcurveto{\pgfqpoint{1.369cm}{0.107cm}}{\pgfqpoint{1.383cm}{0.141cm}}{\pgfqpoint{1.383cm}{0.178cm}}
\pgfusepath{stroke}
\end{pgfscope}
\end{pgfscope}
\end{pgfscope}
\end{pgfscope}
\end{tikzpicture}}} \|^{1/4}_{C_{T} \CC^{- \kappa, \varepsilon}
   (\rho^{\sigma})}, \quad \| \tilde{X}_{M, \varepsilon}^{\!\resizebox{!}{.8em}{
\begin{tikzpicture}
\pgfpathmoveto{\pgfqpoint{0cm}{-0.035cm}}
\pgfpathlineto{\pgfqpoint{1.976cm}{-0.035cm}}
\pgfpathlineto{\pgfqpoint{1.976cm}{1.94cm}}
\pgfpathlineto{\pgfqpoint{0cm}{1.94cm}}
\pgfpathclose
\pgfusepath{clip}
\begin{pgfscope}
\begin{pgfscope}
\pgfpathmoveto{\pgfqpoint{0cm}{-0.035cm}}
\pgfpathlineto{\pgfqpoint{1.976cm}{-0.035cm}}
\pgfpathlineto{\pgfqpoint{1.976cm}{1.94cm}}
\pgfpathlineto{\pgfqpoint{0cm}{1.94cm}}
\pgfpathclose
\pgfusepath{clip}
\begin{pgfscope}
\begin{pgfscope}
\pgfsetdash{}{0cm}
\pgfsetlinewidth{0.818mm}
\pgfsetroundcap
\pgfsetroundjoin
\pgfsetmiterlimit{7.0}
\definecolor{eps2pgf_color}{gray}{0}\pgfsetstrokecolor{eps2pgf_color}\pgfsetfillcolor{eps2pgf_color}
\pgfpathmoveto{\pgfqpoint{0.117cm}{1.815cm}}
\pgfpathlineto{\pgfqpoint{0.682cm}{1.065cm}}
\pgfpathlineto{\pgfqpoint{1.246cm}{1.815cm}}
\pgfusepath{stroke}
\end{pgfscope}
\definecolor{eps2pgf_color}{gray}{0}\pgfsetstrokecolor{eps2pgf_color}\pgfsetfillcolor{eps2pgf_color}
\pgfpathmoveto{\pgfqpoint{0.273cm}{1.789cm}}
\pgfpathcurveto{\pgfqpoint{0.273cm}{1.825cm}}{\pgfqpoint{0.259cm}{1.86cm}}{\pgfqpoint{0.233cm}{1.886cm}}
\pgfpathcurveto{\pgfqpoint{0.207cm}{1.912cm}}{\pgfqpoint{0.173cm}{1.926cm}}{\pgfqpoint{0.137cm}{1.926cm}}
\pgfpathcurveto{\pgfqpoint{0.1cm}{1.926cm}}{\pgfqpoint{0.066cm}{1.912cm}}{\pgfqpoint{0.04cm}{1.886cm}}
\pgfpathcurveto{\pgfqpoint{0.014cm}{1.86cm}}{\pgfqpoint{0cm}{1.825cm}}{\pgfqpoint{0cm}{1.789cm}}
\pgfpathcurveto{\pgfqpoint{0cm}{1.753cm}}{\pgfqpoint{0.014cm}{1.718cm}}{\pgfqpoint{0.04cm}{1.692cm}}
\pgfpathcurveto{\pgfqpoint{0.066cm}{1.667cm}}{\pgfqpoint{0.1cm}{1.652cm}}{\pgfqpoint{0.137cm}{1.652cm}}
\pgfpathcurveto{\pgfqpoint{0.173cm}{1.652cm}}{\pgfqpoint{0.207cm}{1.667cm}}{\pgfqpoint{0.233cm}{1.692cm}}
\pgfpathcurveto{\pgfqpoint{0.259cm}{1.718cm}}{\pgfqpoint{0.273cm}{1.753cm}}{\pgfqpoint{0.273cm}{1.789cm}}
\pgfusepath{fill}
\pgfpathmoveto{\pgfqpoint{1.345cm}{1.765cm}}
\pgfpathcurveto{\pgfqpoint{1.345cm}{1.801cm}}{\pgfqpoint{1.331cm}{1.836cm}}{\pgfqpoint{1.305cm}{1.862cm}}
\pgfpathcurveto{\pgfqpoint{1.28cm}{1.887cm}}{\pgfqpoint{1.245cm}{1.902cm}}{\pgfqpoint{1.209cm}{1.902cm}}
\pgfpathcurveto{\pgfqpoint{1.172cm}{1.902cm}}{\pgfqpoint{1.138cm}{1.887cm}}{\pgfqpoint{1.112cm}{1.862cm}}
\pgfpathcurveto{\pgfqpoint{1.087cm}{1.836cm}}{\pgfqpoint{1.072cm}{1.801cm}}{\pgfqpoint{1.072cm}{1.765cm}}
\pgfpathcurveto{\pgfqpoint{1.072cm}{1.728cm}}{\pgfqpoint{1.087cm}{1.694cm}}{\pgfqpoint{1.112cm}{1.668cm}}
\pgfpathcurveto{\pgfqpoint{1.138cm}{1.642cm}}{\pgfqpoint{1.172cm}{1.628cm}}{\pgfqpoint{1.209cm}{1.628cm}}
\pgfpathcurveto{\pgfqpoint{1.245cm}{1.628cm}}{\pgfqpoint{1.28cm}{1.642cm}}{\pgfqpoint{1.305cm}{1.668cm}}
\pgfpathcurveto{\pgfqpoint{1.331cm}{1.694cm}}{\pgfqpoint{1.345cm}{1.728cm}}{\pgfqpoint{1.345cm}{1.765cm}}
\pgfusepath{fill}
\begin{pgfscope}
\pgfsetdash{}{0cm}
\pgfsetlinewidth{0.818mm}
\pgfsetroundcap
\pgfsetroundjoin
\pgfsetmiterlimit{7.0}
\pgfpathmoveto{\pgfqpoint{0.682cm}{1.065cm}}
\pgfpathlineto{\pgfqpoint{1.246cm}{0.315cm}}
\pgfpathlineto{\pgfqpoint{1.811cm}{1.065cm}}
\pgfusepath{stroke}
\end{pgfscope}
\pgfpathmoveto{\pgfqpoint{1.948cm}{1.065cm}}
\pgfpathcurveto{\pgfqpoint{1.948cm}{1.101cm}}{\pgfqpoint{1.933cm}{1.136cm}}{\pgfqpoint{1.907cm}{1.162cm}}
\pgfpathcurveto{\pgfqpoint{1.882cm}{1.187cm}}{\pgfqpoint{1.847cm}{1.202cm}}{\pgfqpoint{1.811cm}{1.202cm}}
\pgfpathcurveto{\pgfqpoint{1.775cm}{1.202cm}}{\pgfqpoint{1.74cm}{1.187cm}}{\pgfqpoint{1.714cm}{1.162cm}}
\pgfpathcurveto{\pgfqpoint{1.689cm}{1.136cm}}{\pgfqpoint{1.674cm}{1.101cm}}{\pgfqpoint{1.674cm}{1.065cm}}
\pgfpathcurveto{\pgfqpoint{1.674cm}{1.029cm}}{\pgfqpoint{1.689cm}{0.994cm}}{\pgfqpoint{1.714cm}{0.968cm}}
\pgfpathcurveto{\pgfqpoint{1.74cm}{0.942cm}}{\pgfqpoint{1.775cm}{0.928cm}}{\pgfqpoint{1.811cm}{0.928cm}}
\pgfpathcurveto{\pgfqpoint{1.847cm}{0.928cm}}{\pgfqpoint{1.882cm}{0.942cm}}{\pgfqpoint{1.907cm}{0.968cm}}
\pgfpathcurveto{\pgfqpoint{1.933cm}{0.994cm}}{\pgfqpoint{1.948cm}{1.029cm}}{\pgfqpoint{1.948cm}{1.065cm}}
\pgfusepath{fill}
\begin{pgfscope}
\pgfsetdash{}{0cm}
\pgfsetlinewidth{0.818mm}
\pgfsetmiterlimit{7.0}
\pgfpathmoveto{\pgfqpoint{1.246cm}{0.315cm}}
\pgfpathlineto{\pgfqpoint{1.244cm}{1.061cm}}
\pgfusepath{stroke}
\end{pgfscope}
\pgfpathmoveto{\pgfqpoint{1.38cm}{1.065cm}}
\pgfpathcurveto{\pgfqpoint{1.38cm}{1.101cm}}{\pgfqpoint{1.366cm}{1.136cm}}{\pgfqpoint{1.34cm}{1.162cm}}
\pgfpathcurveto{\pgfqpoint{1.315cm}{1.187cm}}{\pgfqpoint{1.28cm}{1.202cm}}{\pgfqpoint{1.244cm}{1.202cm}}
\pgfpathcurveto{\pgfqpoint{1.207cm}{1.202cm}}{\pgfqpoint{1.173cm}{1.187cm}}{\pgfqpoint{1.147cm}{1.162cm}}
\pgfpathcurveto{\pgfqpoint{1.121cm}{1.136cm}}{\pgfqpoint{1.107cm}{1.101cm}}{\pgfqpoint{1.107cm}{1.065cm}}
\pgfpathcurveto{\pgfqpoint{1.107cm}{1.029cm}}{\pgfqpoint{1.121cm}{0.994cm}}{\pgfqpoint{1.147cm}{0.968cm}}
\pgfpathcurveto{\pgfqpoint{1.173cm}{0.942cm}}{\pgfqpoint{1.207cm}{0.928cm}}{\pgfqpoint{1.244cm}{0.928cm}}
\pgfpathcurveto{\pgfqpoint{1.28cm}{0.928cm}}{\pgfqpoint{1.315cm}{0.942cm}}{\pgfqpoint{1.34cm}{0.968cm}}
\pgfpathcurveto{\pgfqpoint{1.366cm}{0.994cm}}{\pgfqpoint{1.38cm}{1.029cm}}{\pgfqpoint{1.38cm}{1.065cm}}
\pgfusepath{fill}
\begin{pgfscope}
\pgfsetdash{}{0cm}
\pgfsetlinewidth{0.818mm}
\pgfsetmiterlimit{4.0}
\pgfpathmoveto{\pgfqpoint{1.383cm}{0.178cm}}
\pgfpathcurveto{\pgfqpoint{1.383cm}{0.214cm}}{\pgfqpoint{1.369cm}{0.249cm}}{\pgfqpoint{1.343cm}{0.275cm}}
\pgfpathcurveto{\pgfqpoint{1.317cm}{0.3cm}}{\pgfqpoint{1.283cm}{0.315cm}}{\pgfqpoint{1.246cm}{0.315cm}}
\pgfpathcurveto{\pgfqpoint{1.21cm}{0.315cm}}{\pgfqpoint{1.175cm}{0.3cm}}{\pgfqpoint{1.15cm}{0.275cm}}
\pgfpathcurveto{\pgfqpoint{1.124cm}{0.249cm}}{\pgfqpoint{1.11cm}{0.214cm}}{\pgfqpoint{1.11cm}{0.178cm}}
\pgfpathcurveto{\pgfqpoint{1.11cm}{0.141cm}}{\pgfqpoint{1.124cm}{0.107cm}}{\pgfqpoint{1.15cm}{0.081cm}}
\pgfpathcurveto{\pgfqpoint{1.175cm}{0.055cm}}{\pgfqpoint{1.21cm}{0.041cm}}{\pgfqpoint{1.246cm}{0.041cm}}
\pgfpathcurveto{\pgfqpoint{1.283cm}{0.041cm}}{\pgfqpoint{1.317cm}{0.055cm}}{\pgfqpoint{1.343cm}{0.081cm}}
\pgfpathcurveto{\pgfqpoint{1.369cm}{0.107cm}}{\pgfqpoint{1.383cm}{0.141cm}}{\pgfqpoint{1.383cm}{0.178cm}}
\pgfusepath{stroke}
\end{pgfscope}
\end{pgfscope}
\end{pgfscope}
\end{pgfscope}
\end{tikzpicture}}} \|^{1/4}_{C_T
   \CC^{- \kappa, \varepsilon} (\rho^{\sigma})}, \quad \| X_{M,
   \varepsilon}^{\!\resizebox{!}{.8em}{
\begin{tikzpicture}
\pgfpathmoveto{\pgfqpoint{0cm}{-0.035cm}}
\pgfpathlineto{\pgfqpoint{1.976cm}{-0.035cm}}
\pgfpathlineto{\pgfqpoint{1.976cm}{1.94cm}}
\pgfpathlineto{\pgfqpoint{0cm}{1.94cm}}
\pgfpathclose
\pgfusepath{clip}
\begin{pgfscope}
\begin{pgfscope}
\pgfpathmoveto{\pgfqpoint{0cm}{-0.035cm}}
\pgfpathlineto{\pgfqpoint{1.976cm}{-0.035cm}}
\pgfpathlineto{\pgfqpoint{1.976cm}{1.94cm}}
\pgfpathlineto{\pgfqpoint{0cm}{1.94cm}}
\pgfpathclose
\pgfusepath{clip}
\begin{pgfscope}
\begin{pgfscope}
\pgfsetdash{}{0cm}
\pgfsetlinewidth{0.818mm}
\pgfsetroundcap
\pgfsetroundjoin
\pgfsetmiterlimit{7.0}
\definecolor{eps2pgf_color}{gray}{0}\pgfsetstrokecolor{eps2pgf_color}\pgfsetfillcolor{eps2pgf_color}
\pgfpathmoveto{\pgfqpoint{0.117cm}{1.815cm}}
\pgfpathlineto{\pgfqpoint{0.682cm}{1.065cm}}
\pgfpathlineto{\pgfqpoint{1.246cm}{1.815cm}}
\pgfusepath{stroke}
\end{pgfscope}
\definecolor{eps2pgf_color}{gray}{0}\pgfsetstrokecolor{eps2pgf_color}\pgfsetfillcolor{eps2pgf_color}
\pgfpathmoveto{\pgfqpoint{0.273cm}{1.789cm}}
\pgfpathcurveto{\pgfqpoint{0.273cm}{1.825cm}}{\pgfqpoint{0.259cm}{1.86cm}}{\pgfqpoint{0.233cm}{1.886cm}}
\pgfpathcurveto{\pgfqpoint{0.207cm}{1.912cm}}{\pgfqpoint{0.173cm}{1.926cm}}{\pgfqpoint{0.137cm}{1.926cm}}
\pgfpathcurveto{\pgfqpoint{0.1cm}{1.926cm}}{\pgfqpoint{0.066cm}{1.912cm}}{\pgfqpoint{0.04cm}{1.886cm}}
\pgfpathcurveto{\pgfqpoint{0.014cm}{1.86cm}}{\pgfqpoint{0cm}{1.825cm}}{\pgfqpoint{0cm}{1.789cm}}
\pgfpathcurveto{\pgfqpoint{0cm}{1.753cm}}{\pgfqpoint{0.014cm}{1.718cm}}{\pgfqpoint{0.04cm}{1.692cm}}
\pgfpathcurveto{\pgfqpoint{0.066cm}{1.667cm}}{\pgfqpoint{0.1cm}{1.652cm}}{\pgfqpoint{0.137cm}{1.652cm}}
\pgfpathcurveto{\pgfqpoint{0.173cm}{1.652cm}}{\pgfqpoint{0.207cm}{1.667cm}}{\pgfqpoint{0.233cm}{1.692cm}}
\pgfpathcurveto{\pgfqpoint{0.259cm}{1.718cm}}{\pgfqpoint{0.273cm}{1.753cm}}{\pgfqpoint{0.273cm}{1.789cm}}
\pgfusepath{fill}
\begin{pgfscope}
\pgfsetdash{}{0cm}
\pgfsetlinewidth{0.818mm}
\pgfsetmiterlimit{7.0}
\pgfpathmoveto{\pgfqpoint{0.682cm}{1.065cm}}
\pgfpathlineto{\pgfqpoint{0.679cm}{1.812cm}}
\pgfusepath{stroke}
\end{pgfscope}
\pgfpathmoveto{\pgfqpoint{0.815cm}{1.793cm}}
\pgfpathcurveto{\pgfqpoint{0.815cm}{1.829cm}}{\pgfqpoint{0.801cm}{1.864cm}}{\pgfqpoint{0.775cm}{1.89cm}}
\pgfpathcurveto{\pgfqpoint{0.75cm}{1.915cm}}{\pgfqpoint{0.715cm}{1.93cm}}{\pgfqpoint{0.679cm}{1.93cm}}
\pgfpathcurveto{\pgfqpoint{0.643cm}{1.93cm}}{\pgfqpoint{0.608cm}{1.915cm}}{\pgfqpoint{0.582cm}{1.89cm}}
\pgfpathcurveto{\pgfqpoint{0.557cm}{1.864cm}}{\pgfqpoint{0.542cm}{1.829cm}}{\pgfqpoint{0.542cm}{1.793cm}}
\pgfpathcurveto{\pgfqpoint{0.542cm}{1.756cm}}{\pgfqpoint{0.557cm}{1.722cm}}{\pgfqpoint{0.582cm}{1.696cm}}
\pgfpathcurveto{\pgfqpoint{0.608cm}{1.67cm}}{\pgfqpoint{0.643cm}{1.656cm}}{\pgfqpoint{0.679cm}{1.656cm}}
\pgfpathcurveto{\pgfqpoint{0.715cm}{1.656cm}}{\pgfqpoint{0.75cm}{1.67cm}}{\pgfqpoint{0.775cm}{1.696cm}}
\pgfpathcurveto{\pgfqpoint{0.801cm}{1.722cm}}{\pgfqpoint{0.815cm}{1.756cm}}{\pgfqpoint{0.815cm}{1.793cm}}
\pgfusepath{fill}
\pgfpathmoveto{\pgfqpoint{1.345cm}{1.765cm}}
\pgfpathcurveto{\pgfqpoint{1.345cm}{1.801cm}}{\pgfqpoint{1.331cm}{1.836cm}}{\pgfqpoint{1.305cm}{1.862cm}}
\pgfpathcurveto{\pgfqpoint{1.28cm}{1.887cm}}{\pgfqpoint{1.245cm}{1.902cm}}{\pgfqpoint{1.209cm}{1.902cm}}
\pgfpathcurveto{\pgfqpoint{1.172cm}{1.902cm}}{\pgfqpoint{1.138cm}{1.887cm}}{\pgfqpoint{1.112cm}{1.862cm}}
\pgfpathcurveto{\pgfqpoint{1.087cm}{1.836cm}}{\pgfqpoint{1.072cm}{1.801cm}}{\pgfqpoint{1.072cm}{1.765cm}}
\pgfpathcurveto{\pgfqpoint{1.072cm}{1.728cm}}{\pgfqpoint{1.087cm}{1.694cm}}{\pgfqpoint{1.112cm}{1.668cm}}
\pgfpathcurveto{\pgfqpoint{1.138cm}{1.642cm}}{\pgfqpoint{1.172cm}{1.628cm}}{\pgfqpoint{1.209cm}{1.628cm}}
\pgfpathcurveto{\pgfqpoint{1.245cm}{1.628cm}}{\pgfqpoint{1.28cm}{1.642cm}}{\pgfqpoint{1.305cm}{1.668cm}}
\pgfpathcurveto{\pgfqpoint{1.331cm}{1.694cm}}{\pgfqpoint{1.345cm}{1.728cm}}{\pgfqpoint{1.345cm}{1.765cm}}
\pgfusepath{fill}
\begin{pgfscope}
\pgfsetdash{}{0cm}
\pgfsetlinewidth{0.818mm}
\pgfsetroundcap
\pgfsetroundjoin
\pgfsetmiterlimit{7.0}
\pgfpathmoveto{\pgfqpoint{0.682cm}{1.065cm}}
\pgfpathlineto{\pgfqpoint{1.246cm}{0.315cm}}
\pgfpathlineto{\pgfqpoint{1.811cm}{1.065cm}}
\pgfusepath{stroke}
\end{pgfscope}
\pgfpathmoveto{\pgfqpoint{1.948cm}{1.065cm}}
\pgfpathcurveto{\pgfqpoint{1.948cm}{1.101cm}}{\pgfqpoint{1.933cm}{1.136cm}}{\pgfqpoint{1.907cm}{1.162cm}}
\pgfpathcurveto{\pgfqpoint{1.882cm}{1.187cm}}{\pgfqpoint{1.847cm}{1.202cm}}{\pgfqpoint{1.811cm}{1.202cm}}
\pgfpathcurveto{\pgfqpoint{1.775cm}{1.202cm}}{\pgfqpoint{1.74cm}{1.187cm}}{\pgfqpoint{1.714cm}{1.162cm}}
\pgfpathcurveto{\pgfqpoint{1.689cm}{1.136cm}}{\pgfqpoint{1.674cm}{1.101cm}}{\pgfqpoint{1.674cm}{1.065cm}}
\pgfpathcurveto{\pgfqpoint{1.674cm}{1.029cm}}{\pgfqpoint{1.689cm}{0.994cm}}{\pgfqpoint{1.714cm}{0.968cm}}
\pgfpathcurveto{\pgfqpoint{1.74cm}{0.942cm}}{\pgfqpoint{1.775cm}{0.928cm}}{\pgfqpoint{1.811cm}{0.928cm}}
\pgfpathcurveto{\pgfqpoint{1.847cm}{0.928cm}}{\pgfqpoint{1.882cm}{0.942cm}}{\pgfqpoint{1.907cm}{0.968cm}}
\pgfpathcurveto{\pgfqpoint{1.933cm}{0.994cm}}{\pgfqpoint{1.948cm}{1.029cm}}{\pgfqpoint{1.948cm}{1.065cm}}
\pgfusepath{fill}
\begin{pgfscope}
\pgfsetdash{}{0cm}
\pgfsetlinewidth{0.818mm}
\pgfsetmiterlimit{7.0}
\pgfpathmoveto{\pgfqpoint{1.246cm}{0.315cm}}
\pgfpathlineto{\pgfqpoint{1.244cm}{1.061cm}}
\pgfusepath{stroke}
\end{pgfscope}
\pgfpathmoveto{\pgfqpoint{1.38cm}{1.065cm}}
\pgfpathcurveto{\pgfqpoint{1.38cm}{1.101cm}}{\pgfqpoint{1.366cm}{1.136cm}}{\pgfqpoint{1.34cm}{1.162cm}}
\pgfpathcurveto{\pgfqpoint{1.315cm}{1.187cm}}{\pgfqpoint{1.28cm}{1.202cm}}{\pgfqpoint{1.244cm}{1.202cm}}
\pgfpathcurveto{\pgfqpoint{1.207cm}{1.202cm}}{\pgfqpoint{1.173cm}{1.187cm}}{\pgfqpoint{1.147cm}{1.162cm}}
\pgfpathcurveto{\pgfqpoint{1.121cm}{1.136cm}}{\pgfqpoint{1.107cm}{1.101cm}}{\pgfqpoint{1.107cm}{1.065cm}}
\pgfpathcurveto{\pgfqpoint{1.107cm}{1.029cm}}{\pgfqpoint{1.121cm}{0.994cm}}{\pgfqpoint{1.147cm}{0.968cm}}
\pgfpathcurveto{\pgfqpoint{1.173cm}{0.942cm}}{\pgfqpoint{1.207cm}{0.928cm}}{\pgfqpoint{1.244cm}{0.928cm}}
\pgfpathcurveto{\pgfqpoint{1.28cm}{0.928cm}}{\pgfqpoint{1.315cm}{0.942cm}}{\pgfqpoint{1.34cm}{0.968cm}}
\pgfpathcurveto{\pgfqpoint{1.366cm}{0.994cm}}{\pgfqpoint{1.38cm}{1.029cm}}{\pgfqpoint{1.38cm}{1.065cm}}
\pgfusepath{fill}
\begin{pgfscope}
\pgfsetdash{}{0cm}
\pgfsetlinewidth{0.818mm}
\pgfsetmiterlimit{4.0}
\pgfpathmoveto{\pgfqpoint{1.383cm}{0.178cm}}
\pgfpathcurveto{\pgfqpoint{1.383cm}{0.214cm}}{\pgfqpoint{1.369cm}{0.249cm}}{\pgfqpoint{1.343cm}{0.275cm}}
\pgfpathcurveto{\pgfqpoint{1.317cm}{0.3cm}}{\pgfqpoint{1.283cm}{0.315cm}}{\pgfqpoint{1.246cm}{0.315cm}}
\pgfpathcurveto{\pgfqpoint{1.21cm}{0.315cm}}{\pgfqpoint{1.175cm}{0.3cm}}{\pgfqpoint{1.15cm}{0.275cm}}
\pgfpathcurveto{\pgfqpoint{1.124cm}{0.249cm}}{\pgfqpoint{1.11cm}{0.214cm}}{\pgfqpoint{1.11cm}{0.178cm}}
\pgfpathcurveto{\pgfqpoint{1.11cm}{0.141cm}}{\pgfqpoint{1.124cm}{0.107cm}}{\pgfqpoint{1.15cm}{0.081cm}}
\pgfpathcurveto{\pgfqpoint{1.175cm}{0.055cm}}{\pgfqpoint{1.21cm}{0.041cm}}{\pgfqpoint{1.246cm}{0.041cm}}
\pgfpathcurveto{\pgfqpoint{1.283cm}{0.041cm}}{\pgfqpoint{1.317cm}{0.055cm}}{\pgfqpoint{1.343cm}{0.081cm}}
\pgfpathcurveto{\pgfqpoint{1.369cm}{0.107cm}}{\pgfqpoint{1.383cm}{0.141cm}}{\pgfqpoint{1.383cm}{0.178cm}}
\pgfusepath{stroke}
\end{pgfscope}
\end{pgfscope}
\end{pgfscope}
\end{pgfscope}
\end{tikzpicture}}} \|^{1/5}_{C_T \CC^{- 1 / 2 - \kappa, \varepsilon} (\rho^{\sigma})}.
\end{equation}
Note that it is bounded uniformly with respect to $M, \varepsilon$.  Besides, if we do not need to be precise about the exact powers, we denote by $Q_{\rho}
    (\mathbb{X}_{M, \varepsilon})$  a generic polynomial in the above
    norms of the noise terms $\mathbb{X}_{M, \varepsilon}$, whose coefficients
    depend on $\rho$ but are independent of $M,\varepsilon, \lambda$, and change from line to
    line.

\subsection{Decomposition and uniform estimates}
\label{s:estim}

With the above stochastic objects at hand, we let $\varphi_{M, \varepsilon}$
be a stationary solution to {\eqref{eq:moll}} on $\Lambda_{M, \varepsilon}$
having at each time $t \geqslant 0$ the law $\nu_{M, \varepsilon}$. We
consider its decomposition $\varphi_{M, \varepsilon} = X_{M, \varepsilon} + Y_{M, \varepsilon} +
  \phi_{M, \varepsilon}$
and deduce that $\phi_{M, \varepsilon}$ satisfies
\begin{equation}
  \begin{array}{lll}
    \LL_{\varepsilon} \phi_{M, \varepsilon} +\lambda \phi_{M, \varepsilon}^3 & = & -
    3\lambda \llbracket X_{M, \varepsilon}^2 \rrbracket \succ \phi_{M, \varepsilon} -
    3\lambda \llbracket X_{M, \varepsilon}^2 \rrbracket \preccurlyeq (Y_{M,
    \varepsilon} + \phi_{M, \varepsilon})\\
    &  & - 3\lambda^2 b_{M, \varepsilon} (X_{M, \varepsilon} + Y_{M, \varepsilon} +
    \phi_{M, \varepsilon}) - 3\lambda ( \UU^{\varepsilon}_{\leqslant L}
    \llbracket X_{M, \varepsilon}^2 \rrbracket ) \succ Y_{M,
    \varepsilon}\\
    &  & - 3\lambda X_{M, \varepsilon} (Y_{M, \varepsilon} + \phi_{M,
    \varepsilon})^2 -\lambda Y_{M, \varepsilon}^3 - 3\lambda Y_{M, \varepsilon}^2 \phi_{M,
    \varepsilon} - 3\lambda Y_{M, \varepsilon} \phi_{M, \varepsilon}^2 .
  \end{array} \label{eq:phi-eq}
\end{equation}
Our next goal is to derive energy estimates for {\eqref{eq:phi-eq}} which hold
true uniformly in both parameters $M, \varepsilon$. To this end, we recall
that all the distributions above were extended periodically to the full
lattice $\Lambda_{\varepsilon}$. Consequently, apart from the stochastic
objects, the renormalization constants and the initial conditions, all the
operations in {\eqref{eq:phi-eq}} are independent of $M$. Therefore, for
notational simplicity, we fix the parameter $M$ and omit the dependence on $M$
throughout the rest of this subsection. The following series of lemmas serves
as a preparation for our main energy estimate established in Theorem
\ref{th:energy-estimate}. Here, we make use of the approximate duality operator $D_{\rho^{4},\varepsilon}$ as well as the commutators $C_{\varepsilon},\,\tilde C_{\varepsilon}$ and $\bar C_{\varepsilon}$ introduced Section~\ref{s:l2}.

\begin{lemma}
  \label{lem:energy12}It holds
  \begin{equation}
    \frac{1}{2} \partial_t \| \rho^2 \phi_{\varepsilon} \|_{L^{2,
    \varepsilon}}^2 +\lambda \| \rho \phi_{\varepsilon} \|_{L^{4, \varepsilon}}^4 +
    m^{2} \| \rho^2 \psi_{\varepsilon} \|_{L^{2, \varepsilon}}^2 + \| \rho^2
    \nabla_{\varepsilon} \psi_{\varepsilon} \|_{L^{2, \varepsilon}}^2 =
    \Theta_{\rho^4, \varepsilon} + \Psi_{\rho^4, \varepsilon} \label{eq:en12}
  \end{equation}
  with
  \begin{equation}
    \psi_{\varepsilon} \assign \phi_{\varepsilon} + \Q_{\varepsilon}^{- 1} [3 \lambda
    \llbracket X_{\varepsilon}^2 \rrbracket \succ \phi_{\varepsilon}],
    \label{eq:psi1}
  \end{equation}
  \[ \Theta_{\rho^4, \varepsilon} \assign - \langle [\nabla_{\varepsilon}, \rho^4]
     \psi_{\varepsilon}, \nabla_{\varepsilon} \psi_{\varepsilon} \rangle_{\varepsilon} +
     \left\langle \left[ \Q_{\varepsilon}, \rho^4 \right] \Q_{\varepsilon}^{-
     1} [3\lambda \llbracket X_{\varepsilon}^2 \rrbracket \succ \phi_{\varepsilon}],
     \psi_{\varepsilon} \right\rangle_{\varepsilon} + \langle \rho^4
     \phi_{\varepsilon}^2, \lambda^2 X_{\varepsilon}^{\!\resizebox{!}{.8em}{
\begin{tikzpicture}
\pgfpathmoveto{\pgfqpoint{0cm}{-0.035cm}}
\pgfpathlineto{\pgfqpoint{1.976cm}{-0.035cm}}
\pgfpathlineto{\pgfqpoint{1.976cm}{1.94cm}}
\pgfpathlineto{\pgfqpoint{0cm}{1.94cm}}
\pgfpathclose
\pgfusepath{clip}
\begin{pgfscope}
\begin{pgfscope}
\pgfpathmoveto{\pgfqpoint{0cm}{-0.035cm}}
\pgfpathlineto{\pgfqpoint{1.976cm}{-0.035cm}}
\pgfpathlineto{\pgfqpoint{1.976cm}{1.94cm}}
\pgfpathlineto{\pgfqpoint{0cm}{1.94cm}}
\pgfpathclose
\pgfusepath{clip}
\begin{pgfscope}
\begin{pgfscope}
\pgfsetdash{}{0cm}
\pgfsetlinewidth{0.818mm}
\pgfsetroundcap
\pgfsetroundjoin
\pgfsetmiterlimit{7.0}
\definecolor{eps2pgf_color}{gray}{0}\pgfsetstrokecolor{eps2pgf_color}\pgfsetfillcolor{eps2pgf_color}
\pgfpathmoveto{\pgfqpoint{0.117cm}{1.815cm}}
\pgfpathlineto{\pgfqpoint{0.682cm}{1.065cm}}
\pgfpathlineto{\pgfqpoint{1.246cm}{1.815cm}}
\pgfusepath{stroke}
\end{pgfscope}
\definecolor{eps2pgf_color}{gray}{0}\pgfsetstrokecolor{eps2pgf_color}\pgfsetfillcolor{eps2pgf_color}
\pgfpathmoveto{\pgfqpoint{0.273cm}{1.789cm}}
\pgfpathcurveto{\pgfqpoint{0.273cm}{1.825cm}}{\pgfqpoint{0.259cm}{1.86cm}}{\pgfqpoint{0.233cm}{1.886cm}}
\pgfpathcurveto{\pgfqpoint{0.207cm}{1.912cm}}{\pgfqpoint{0.173cm}{1.926cm}}{\pgfqpoint{0.137cm}{1.926cm}}
\pgfpathcurveto{\pgfqpoint{0.1cm}{1.926cm}}{\pgfqpoint{0.066cm}{1.912cm}}{\pgfqpoint{0.04cm}{1.886cm}}
\pgfpathcurveto{\pgfqpoint{0.014cm}{1.86cm}}{\pgfqpoint{0cm}{1.825cm}}{\pgfqpoint{0cm}{1.789cm}}
\pgfpathcurveto{\pgfqpoint{0cm}{1.753cm}}{\pgfqpoint{0.014cm}{1.718cm}}{\pgfqpoint{0.04cm}{1.692cm}}
\pgfpathcurveto{\pgfqpoint{0.066cm}{1.667cm}}{\pgfqpoint{0.1cm}{1.652cm}}{\pgfqpoint{0.137cm}{1.652cm}}
\pgfpathcurveto{\pgfqpoint{0.173cm}{1.652cm}}{\pgfqpoint{0.207cm}{1.667cm}}{\pgfqpoint{0.233cm}{1.692cm}}
\pgfpathcurveto{\pgfqpoint{0.259cm}{1.718cm}}{\pgfqpoint{0.273cm}{1.753cm}}{\pgfqpoint{0.273cm}{1.789cm}}
\pgfusepath{fill}
\pgfpathmoveto{\pgfqpoint{1.345cm}{1.765cm}}
\pgfpathcurveto{\pgfqpoint{1.345cm}{1.801cm}}{\pgfqpoint{1.331cm}{1.836cm}}{\pgfqpoint{1.305cm}{1.862cm}}
\pgfpathcurveto{\pgfqpoint{1.28cm}{1.887cm}}{\pgfqpoint{1.245cm}{1.902cm}}{\pgfqpoint{1.209cm}{1.902cm}}
\pgfpathcurveto{\pgfqpoint{1.172cm}{1.902cm}}{\pgfqpoint{1.138cm}{1.887cm}}{\pgfqpoint{1.112cm}{1.862cm}}
\pgfpathcurveto{\pgfqpoint{1.087cm}{1.836cm}}{\pgfqpoint{1.072cm}{1.801cm}}{\pgfqpoint{1.072cm}{1.765cm}}
\pgfpathcurveto{\pgfqpoint{1.072cm}{1.728cm}}{\pgfqpoint{1.087cm}{1.694cm}}{\pgfqpoint{1.112cm}{1.668cm}}
\pgfpathcurveto{\pgfqpoint{1.138cm}{1.642cm}}{\pgfqpoint{1.172cm}{1.628cm}}{\pgfqpoint{1.209cm}{1.628cm}}
\pgfpathcurveto{\pgfqpoint{1.245cm}{1.628cm}}{\pgfqpoint{1.28cm}{1.642cm}}{\pgfqpoint{1.305cm}{1.668cm}}
\pgfpathcurveto{\pgfqpoint{1.331cm}{1.694cm}}{\pgfqpoint{1.345cm}{1.728cm}}{\pgfqpoint{1.345cm}{1.765cm}}
\pgfusepath{fill}
\begin{pgfscope}
\pgfsetdash{}{0cm}
\pgfsetlinewidth{0.818mm}
\pgfsetroundcap
\pgfsetroundjoin
\pgfsetmiterlimit{7.0}
\pgfpathmoveto{\pgfqpoint{0.682cm}{1.065cm}}
\pgfpathlineto{\pgfqpoint{1.246cm}{0.315cm}}
\pgfpathlineto{\pgfqpoint{1.811cm}{1.065cm}}
\pgfusepath{stroke}
\end{pgfscope}
\pgfpathmoveto{\pgfqpoint{1.948cm}{1.065cm}}
\pgfpathcurveto{\pgfqpoint{1.948cm}{1.101cm}}{\pgfqpoint{1.933cm}{1.136cm}}{\pgfqpoint{1.907cm}{1.162cm}}
\pgfpathcurveto{\pgfqpoint{1.882cm}{1.187cm}}{\pgfqpoint{1.847cm}{1.202cm}}{\pgfqpoint{1.811cm}{1.202cm}}
\pgfpathcurveto{\pgfqpoint{1.775cm}{1.202cm}}{\pgfqpoint{1.74cm}{1.187cm}}{\pgfqpoint{1.714cm}{1.162cm}}
\pgfpathcurveto{\pgfqpoint{1.689cm}{1.136cm}}{\pgfqpoint{1.674cm}{1.101cm}}{\pgfqpoint{1.674cm}{1.065cm}}
\pgfpathcurveto{\pgfqpoint{1.674cm}{1.029cm}}{\pgfqpoint{1.689cm}{0.994cm}}{\pgfqpoint{1.714cm}{0.968cm}}
\pgfpathcurveto{\pgfqpoint{1.74cm}{0.942cm}}{\pgfqpoint{1.775cm}{0.928cm}}{\pgfqpoint{1.811cm}{0.928cm}}
\pgfpathcurveto{\pgfqpoint{1.847cm}{0.928cm}}{\pgfqpoint{1.882cm}{0.942cm}}{\pgfqpoint{1.907cm}{0.968cm}}
\pgfpathcurveto{\pgfqpoint{1.933cm}{0.994cm}}{\pgfqpoint{1.948cm}{1.029cm}}{\pgfqpoint{1.948cm}{1.065cm}}
\pgfusepath{fill}
\begin{pgfscope}
\pgfsetdash{}{0cm}
\pgfsetlinewidth{0.818mm}
\pgfsetmiterlimit{7.0}
\pgfpathmoveto{\pgfqpoint{1.246cm}{0.315cm}}
\pgfpathlineto{\pgfqpoint{1.244cm}{1.061cm}}
\pgfusepath{stroke}
\end{pgfscope}
\pgfpathmoveto{\pgfqpoint{1.38cm}{1.065cm}}
\pgfpathcurveto{\pgfqpoint{1.38cm}{1.101cm}}{\pgfqpoint{1.366cm}{1.136cm}}{\pgfqpoint{1.34cm}{1.162cm}}
\pgfpathcurveto{\pgfqpoint{1.315cm}{1.187cm}}{\pgfqpoint{1.28cm}{1.202cm}}{\pgfqpoint{1.244cm}{1.202cm}}
\pgfpathcurveto{\pgfqpoint{1.207cm}{1.202cm}}{\pgfqpoint{1.173cm}{1.187cm}}{\pgfqpoint{1.147cm}{1.162cm}}
\pgfpathcurveto{\pgfqpoint{1.121cm}{1.136cm}}{\pgfqpoint{1.107cm}{1.101cm}}{\pgfqpoint{1.107cm}{1.065cm}}
\pgfpathcurveto{\pgfqpoint{1.107cm}{1.029cm}}{\pgfqpoint{1.121cm}{0.994cm}}{\pgfqpoint{1.147cm}{0.968cm}}
\pgfpathcurveto{\pgfqpoint{1.173cm}{0.942cm}}{\pgfqpoint{1.207cm}{0.928cm}}{\pgfqpoint{1.244cm}{0.928cm}}
\pgfpathcurveto{\pgfqpoint{1.28cm}{0.928cm}}{\pgfqpoint{1.315cm}{0.942cm}}{\pgfqpoint{1.34cm}{0.968cm}}
\pgfpathcurveto{\pgfqpoint{1.366cm}{0.994cm}}{\pgfqpoint{1.38cm}{1.029cm}}{\pgfqpoint{1.38cm}{1.065cm}}
\pgfusepath{fill}
\begin{pgfscope}
\pgfsetdash{}{0cm}
\pgfsetlinewidth{0.818mm}
\pgfsetmiterlimit{4.0}
\pgfpathmoveto{\pgfqpoint{1.383cm}{0.178cm}}
\pgfpathcurveto{\pgfqpoint{1.383cm}{0.214cm}}{\pgfqpoint{1.369cm}{0.249cm}}{\pgfqpoint{1.343cm}{0.275cm}}
\pgfpathcurveto{\pgfqpoint{1.317cm}{0.3cm}}{\pgfqpoint{1.283cm}{0.315cm}}{\pgfqpoint{1.246cm}{0.315cm}}
\pgfpathcurveto{\pgfqpoint{1.21cm}{0.315cm}}{\pgfqpoint{1.175cm}{0.3cm}}{\pgfqpoint{1.15cm}{0.275cm}}
\pgfpathcurveto{\pgfqpoint{1.124cm}{0.249cm}}{\pgfqpoint{1.11cm}{0.214cm}}{\pgfqpoint{1.11cm}{0.178cm}}
\pgfpathcurveto{\pgfqpoint{1.11cm}{0.141cm}}{\pgfqpoint{1.124cm}{0.107cm}}{\pgfqpoint{1.15cm}{0.081cm}}
\pgfpathcurveto{\pgfqpoint{1.175cm}{0.055cm}}{\pgfqpoint{1.21cm}{0.041cm}}{\pgfqpoint{1.246cm}{0.041cm}}
\pgfpathcurveto{\pgfqpoint{1.283cm}{0.041cm}}{\pgfqpoint{1.317cm}{0.055cm}}{\pgfqpoint{1.343cm}{0.081cm}}
\pgfpathcurveto{\pgfqpoint{1.369cm}{0.107cm}}{\pgfqpoint{1.383cm}{0.141cm}}{\pgfqpoint{1.383cm}{0.178cm}}
\pgfusepath{stroke}
\end{pgfscope}
\end{pgfscope}
\end{pgfscope}
\end{pgfscope}
\end{tikzpicture}}} \rangle_{\varepsilon} \]
  \[ + D_{\rho^4, \varepsilon} (\phi_{\varepsilon}, - 3\lambda \llbracket
     X_{\varepsilon}^2 \rrbracket, \phi_{\varepsilon}) + \langle \rho^4
     \phi_{\varepsilon}, \tilde{C}_{\varepsilon} (\phi_{\varepsilon}, 3\lambda
     \llbracket X_{\varepsilon}^2 \rrbracket, 3\lambda \llbracket X_{\varepsilon}^2
     \rrbracket) \rangle_{\varepsilon} \]
  \[ + D_{\rho^4, \varepsilon} \left( \phi_{\varepsilon}, 3\lambda \llbracket
     X_{\varepsilon}^2 \rrbracket, \Q_{\varepsilon}^{- 1} [3\lambda \llbracket
     X_{\varepsilon}^2 \rrbracket \succ \phi_{\varepsilon}] \right), \]
  \[ \Psi_{\rho^4, \varepsilon} \assign \langle \rho^4 \phi_{\varepsilon}, - 3\lambda
     \llbracket X_{\varepsilon}^2 \rrbracket \prec (Y_{\varepsilon} +
     \phi_{\varepsilon}) - 3\lambda X_{\varepsilon} (Y_{\varepsilon} +
     \phi_{\varepsilon})^2 - \lambda Y_{\varepsilon}^3 - 3 \lambda Y_{\varepsilon}^2
     \phi_{\varepsilon} - 3\lambda Y_{\varepsilon} \phi_{\varepsilon}^2 \rangle_{\varepsilon} \]
  \[ + \langle \rho^4 \phi_{\varepsilon}, - 3\lambda ( \UU^{\varepsilon}_{\leqslant}
     \llbracket X_{\varepsilon}^2 \rrbracket ) \succ Y_{\varepsilon} + \lambda^2
     Z_{\varepsilon} \rangle_{\varepsilon}, \]
  and
  \begin{equation}\label{eq:def-Z}
   Z_{\varepsilon} \assign X_{\varepsilon}^{\!\resizebox{!}{.8em}{
\begin{tikzpicture}
\pgfpathmoveto{\pgfqpoint{0cm}{-0.035cm}}
\pgfpathlineto{\pgfqpoint{1.976cm}{-0.035cm}}
\pgfpathlineto{\pgfqpoint{1.976cm}{1.94cm}}
\pgfpathlineto{\pgfqpoint{0cm}{1.94cm}}
\pgfpathclose
\pgfusepath{clip}
\begin{pgfscope}
\begin{pgfscope}
\pgfpathmoveto{\pgfqpoint{0cm}{-0.035cm}}
\pgfpathlineto{\pgfqpoint{1.976cm}{-0.035cm}}
\pgfpathlineto{\pgfqpoint{1.976cm}{1.94cm}}
\pgfpathlineto{\pgfqpoint{0cm}{1.94cm}}
\pgfpathclose
\pgfusepath{clip}
\begin{pgfscope}
\begin{pgfscope}
\pgfsetdash{}{0cm}
\pgfsetlinewidth{0.818mm}
\pgfsetroundcap
\pgfsetroundjoin
\pgfsetmiterlimit{7.0}
\definecolor{eps2pgf_color}{gray}{0}\pgfsetstrokecolor{eps2pgf_color}\pgfsetfillcolor{eps2pgf_color}
\pgfpathmoveto{\pgfqpoint{0.117cm}{1.815cm}}
\pgfpathlineto{\pgfqpoint{0.682cm}{1.065cm}}
\pgfpathlineto{\pgfqpoint{1.246cm}{1.815cm}}
\pgfusepath{stroke}
\end{pgfscope}
\definecolor{eps2pgf_color}{gray}{0}\pgfsetstrokecolor{eps2pgf_color}\pgfsetfillcolor{eps2pgf_color}
\pgfpathmoveto{\pgfqpoint{0.273cm}{1.789cm}}
\pgfpathcurveto{\pgfqpoint{0.273cm}{1.825cm}}{\pgfqpoint{0.259cm}{1.86cm}}{\pgfqpoint{0.233cm}{1.886cm}}
\pgfpathcurveto{\pgfqpoint{0.207cm}{1.912cm}}{\pgfqpoint{0.173cm}{1.926cm}}{\pgfqpoint{0.137cm}{1.926cm}}
\pgfpathcurveto{\pgfqpoint{0.1cm}{1.926cm}}{\pgfqpoint{0.066cm}{1.912cm}}{\pgfqpoint{0.04cm}{1.886cm}}
\pgfpathcurveto{\pgfqpoint{0.014cm}{1.86cm}}{\pgfqpoint{0cm}{1.825cm}}{\pgfqpoint{0cm}{1.789cm}}
\pgfpathcurveto{\pgfqpoint{0cm}{1.753cm}}{\pgfqpoint{0.014cm}{1.718cm}}{\pgfqpoint{0.04cm}{1.692cm}}
\pgfpathcurveto{\pgfqpoint{0.066cm}{1.667cm}}{\pgfqpoint{0.1cm}{1.652cm}}{\pgfqpoint{0.137cm}{1.652cm}}
\pgfpathcurveto{\pgfqpoint{0.173cm}{1.652cm}}{\pgfqpoint{0.207cm}{1.667cm}}{\pgfqpoint{0.233cm}{1.692cm}}
\pgfpathcurveto{\pgfqpoint{0.259cm}{1.718cm}}{\pgfqpoint{0.273cm}{1.753cm}}{\pgfqpoint{0.273cm}{1.789cm}}
\pgfusepath{fill}
\begin{pgfscope}
\pgfsetdash{}{0cm}
\pgfsetlinewidth{0.818mm}
\pgfsetmiterlimit{7.0}
\pgfpathmoveto{\pgfqpoint{0.682cm}{1.065cm}}
\pgfpathlineto{\pgfqpoint{0.679cm}{1.812cm}}
\pgfusepath{stroke}
\end{pgfscope}
\pgfpathmoveto{\pgfqpoint{0.815cm}{1.793cm}}
\pgfpathcurveto{\pgfqpoint{0.815cm}{1.829cm}}{\pgfqpoint{0.801cm}{1.864cm}}{\pgfqpoint{0.775cm}{1.89cm}}
\pgfpathcurveto{\pgfqpoint{0.75cm}{1.915cm}}{\pgfqpoint{0.715cm}{1.93cm}}{\pgfqpoint{0.679cm}{1.93cm}}
\pgfpathcurveto{\pgfqpoint{0.643cm}{1.93cm}}{\pgfqpoint{0.608cm}{1.915cm}}{\pgfqpoint{0.582cm}{1.89cm}}
\pgfpathcurveto{\pgfqpoint{0.557cm}{1.864cm}}{\pgfqpoint{0.542cm}{1.829cm}}{\pgfqpoint{0.542cm}{1.793cm}}
\pgfpathcurveto{\pgfqpoint{0.542cm}{1.756cm}}{\pgfqpoint{0.557cm}{1.722cm}}{\pgfqpoint{0.582cm}{1.696cm}}
\pgfpathcurveto{\pgfqpoint{0.608cm}{1.67cm}}{\pgfqpoint{0.643cm}{1.656cm}}{\pgfqpoint{0.679cm}{1.656cm}}
\pgfpathcurveto{\pgfqpoint{0.715cm}{1.656cm}}{\pgfqpoint{0.75cm}{1.67cm}}{\pgfqpoint{0.775cm}{1.696cm}}
\pgfpathcurveto{\pgfqpoint{0.801cm}{1.722cm}}{\pgfqpoint{0.815cm}{1.756cm}}{\pgfqpoint{0.815cm}{1.793cm}}
\pgfusepath{fill}
\pgfpathmoveto{\pgfqpoint{1.345cm}{1.765cm}}
\pgfpathcurveto{\pgfqpoint{1.345cm}{1.801cm}}{\pgfqpoint{1.331cm}{1.836cm}}{\pgfqpoint{1.305cm}{1.862cm}}
\pgfpathcurveto{\pgfqpoint{1.28cm}{1.887cm}}{\pgfqpoint{1.245cm}{1.902cm}}{\pgfqpoint{1.209cm}{1.902cm}}
\pgfpathcurveto{\pgfqpoint{1.172cm}{1.902cm}}{\pgfqpoint{1.138cm}{1.887cm}}{\pgfqpoint{1.112cm}{1.862cm}}
\pgfpathcurveto{\pgfqpoint{1.087cm}{1.836cm}}{\pgfqpoint{1.072cm}{1.801cm}}{\pgfqpoint{1.072cm}{1.765cm}}
\pgfpathcurveto{\pgfqpoint{1.072cm}{1.728cm}}{\pgfqpoint{1.087cm}{1.694cm}}{\pgfqpoint{1.112cm}{1.668cm}}
\pgfpathcurveto{\pgfqpoint{1.138cm}{1.642cm}}{\pgfqpoint{1.172cm}{1.628cm}}{\pgfqpoint{1.209cm}{1.628cm}}
\pgfpathcurveto{\pgfqpoint{1.245cm}{1.628cm}}{\pgfqpoint{1.28cm}{1.642cm}}{\pgfqpoint{1.305cm}{1.668cm}}
\pgfpathcurveto{\pgfqpoint{1.331cm}{1.694cm}}{\pgfqpoint{1.345cm}{1.728cm}}{\pgfqpoint{1.345cm}{1.765cm}}
\pgfusepath{fill}
\begin{pgfscope}
\pgfsetdash{}{0cm}
\pgfsetlinewidth{0.818mm}
\pgfsetroundcap
\pgfsetroundjoin
\pgfsetmiterlimit{7.0}
\pgfpathmoveto{\pgfqpoint{0.682cm}{1.065cm}}
\pgfpathlineto{\pgfqpoint{1.246cm}{0.315cm}}
\pgfpathlineto{\pgfqpoint{1.811cm}{1.065cm}}
\pgfusepath{stroke}
\end{pgfscope}
\pgfpathmoveto{\pgfqpoint{1.948cm}{1.065cm}}
\pgfpathcurveto{\pgfqpoint{1.948cm}{1.101cm}}{\pgfqpoint{1.933cm}{1.136cm}}{\pgfqpoint{1.907cm}{1.162cm}}
\pgfpathcurveto{\pgfqpoint{1.882cm}{1.187cm}}{\pgfqpoint{1.847cm}{1.202cm}}{\pgfqpoint{1.811cm}{1.202cm}}
\pgfpathcurveto{\pgfqpoint{1.775cm}{1.202cm}}{\pgfqpoint{1.74cm}{1.187cm}}{\pgfqpoint{1.714cm}{1.162cm}}
\pgfpathcurveto{\pgfqpoint{1.689cm}{1.136cm}}{\pgfqpoint{1.674cm}{1.101cm}}{\pgfqpoint{1.674cm}{1.065cm}}
\pgfpathcurveto{\pgfqpoint{1.674cm}{1.029cm}}{\pgfqpoint{1.689cm}{0.994cm}}{\pgfqpoint{1.714cm}{0.968cm}}
\pgfpathcurveto{\pgfqpoint{1.74cm}{0.942cm}}{\pgfqpoint{1.775cm}{0.928cm}}{\pgfqpoint{1.811cm}{0.928cm}}
\pgfpathcurveto{\pgfqpoint{1.847cm}{0.928cm}}{\pgfqpoint{1.882cm}{0.942cm}}{\pgfqpoint{1.907cm}{0.968cm}}
\pgfpathcurveto{\pgfqpoint{1.933cm}{0.994cm}}{\pgfqpoint{1.948cm}{1.029cm}}{\pgfqpoint{1.948cm}{1.065cm}}
\pgfusepath{fill}
\begin{pgfscope}
\pgfsetdash{}{0cm}
\pgfsetlinewidth{0.818mm}
\pgfsetmiterlimit{7.0}
\pgfpathmoveto{\pgfqpoint{1.246cm}{0.315cm}}
\pgfpathlineto{\pgfqpoint{1.244cm}{1.061cm}}
\pgfusepath{stroke}
\end{pgfscope}
\pgfpathmoveto{\pgfqpoint{1.38cm}{1.065cm}}
\pgfpathcurveto{\pgfqpoint{1.38cm}{1.101cm}}{\pgfqpoint{1.366cm}{1.136cm}}{\pgfqpoint{1.34cm}{1.162cm}}
\pgfpathcurveto{\pgfqpoint{1.315cm}{1.187cm}}{\pgfqpoint{1.28cm}{1.202cm}}{\pgfqpoint{1.244cm}{1.202cm}}
\pgfpathcurveto{\pgfqpoint{1.207cm}{1.202cm}}{\pgfqpoint{1.173cm}{1.187cm}}{\pgfqpoint{1.147cm}{1.162cm}}
\pgfpathcurveto{\pgfqpoint{1.121cm}{1.136cm}}{\pgfqpoint{1.107cm}{1.101cm}}{\pgfqpoint{1.107cm}{1.065cm}}
\pgfpathcurveto{\pgfqpoint{1.107cm}{1.029cm}}{\pgfqpoint{1.121cm}{0.994cm}}{\pgfqpoint{1.147cm}{0.968cm}}
\pgfpathcurveto{\pgfqpoint{1.173cm}{0.942cm}}{\pgfqpoint{1.207cm}{0.928cm}}{\pgfqpoint{1.244cm}{0.928cm}}
\pgfpathcurveto{\pgfqpoint{1.28cm}{0.928cm}}{\pgfqpoint{1.315cm}{0.942cm}}{\pgfqpoint{1.34cm}{0.968cm}}
\pgfpathcurveto{\pgfqpoint{1.366cm}{0.994cm}}{\pgfqpoint{1.38cm}{1.029cm}}{\pgfqpoint{1.38cm}{1.065cm}}
\pgfusepath{fill}
\begin{pgfscope}
\pgfsetdash{}{0cm}
\pgfsetlinewidth{0.818mm}
\pgfsetmiterlimit{4.0}
\pgfpathmoveto{\pgfqpoint{1.383cm}{0.178cm}}
\pgfpathcurveto{\pgfqpoint{1.383cm}{0.214cm}}{\pgfqpoint{1.369cm}{0.249cm}}{\pgfqpoint{1.343cm}{0.275cm}}
\pgfpathcurveto{\pgfqpoint{1.317cm}{0.3cm}}{\pgfqpoint{1.283cm}{0.315cm}}{\pgfqpoint{1.246cm}{0.315cm}}
\pgfpathcurveto{\pgfqpoint{1.21cm}{0.315cm}}{\pgfqpoint{1.175cm}{0.3cm}}{\pgfqpoint{1.15cm}{0.275cm}}
\pgfpathcurveto{\pgfqpoint{1.124cm}{0.249cm}}{\pgfqpoint{1.11cm}{0.214cm}}{\pgfqpoint{1.11cm}{0.178cm}}
\pgfpathcurveto{\pgfqpoint{1.11cm}{0.141cm}}{\pgfqpoint{1.124cm}{0.107cm}}{\pgfqpoint{1.15cm}{0.081cm}}
\pgfpathcurveto{\pgfqpoint{1.175cm}{0.055cm}}{\pgfqpoint{1.21cm}{0.041cm}}{\pgfqpoint{1.246cm}{0.041cm}}
\pgfpathcurveto{\pgfqpoint{1.283cm}{0.041cm}}{\pgfqpoint{1.317cm}{0.055cm}}{\pgfqpoint{1.343cm}{0.081cm}}
\pgfpathcurveto{\pgfqpoint{1.369cm}{0.107cm}}{\pgfqpoint{1.383cm}{0.141cm}}{\pgfqpoint{1.383cm}{0.178cm}}
\pgfusepath{stroke}
\end{pgfscope}
\end{pgfscope}
\end{pgfscope}
\end{pgfscope}
\end{tikzpicture}}} +
     \tilde{X}_{\varepsilon}^{\!\resizebox{!}{.8em}{
\begin{tikzpicture}
\pgfpathmoveto{\pgfqpoint{0cm}{-0.035cm}}
\pgfpathlineto{\pgfqpoint{1.976cm}{-0.035cm}}
\pgfpathlineto{\pgfqpoint{1.976cm}{1.94cm}}
\pgfpathlineto{\pgfqpoint{0cm}{1.94cm}}
\pgfpathclose
\pgfusepath{clip}
\begin{pgfscope}
\begin{pgfscope}
\pgfpathmoveto{\pgfqpoint{0cm}{-0.035cm}}
\pgfpathlineto{\pgfqpoint{1.976cm}{-0.035cm}}
\pgfpathlineto{\pgfqpoint{1.976cm}{1.94cm}}
\pgfpathlineto{\pgfqpoint{0cm}{1.94cm}}
\pgfpathclose
\pgfusepath{clip}
\begin{pgfscope}
\begin{pgfscope}
\pgfsetdash{}{0cm}
\pgfsetlinewidth{0.818mm}
\pgfsetroundcap
\pgfsetroundjoin
\pgfsetmiterlimit{7.0}
\definecolor{eps2pgf_color}{gray}{0}\pgfsetstrokecolor{eps2pgf_color}\pgfsetfillcolor{eps2pgf_color}
\pgfpathmoveto{\pgfqpoint{0.117cm}{1.815cm}}
\pgfpathlineto{\pgfqpoint{0.682cm}{1.065cm}}
\pgfpathlineto{\pgfqpoint{1.246cm}{1.815cm}}
\pgfusepath{stroke}
\end{pgfscope}
\definecolor{eps2pgf_color}{gray}{0}\pgfsetstrokecolor{eps2pgf_color}\pgfsetfillcolor{eps2pgf_color}
\pgfpathmoveto{\pgfqpoint{0.273cm}{1.789cm}}
\pgfpathcurveto{\pgfqpoint{0.273cm}{1.825cm}}{\pgfqpoint{0.259cm}{1.86cm}}{\pgfqpoint{0.233cm}{1.886cm}}
\pgfpathcurveto{\pgfqpoint{0.207cm}{1.912cm}}{\pgfqpoint{0.173cm}{1.926cm}}{\pgfqpoint{0.137cm}{1.926cm}}
\pgfpathcurveto{\pgfqpoint{0.1cm}{1.926cm}}{\pgfqpoint{0.066cm}{1.912cm}}{\pgfqpoint{0.04cm}{1.886cm}}
\pgfpathcurveto{\pgfqpoint{0.014cm}{1.86cm}}{\pgfqpoint{0cm}{1.825cm}}{\pgfqpoint{0cm}{1.789cm}}
\pgfpathcurveto{\pgfqpoint{0cm}{1.753cm}}{\pgfqpoint{0.014cm}{1.718cm}}{\pgfqpoint{0.04cm}{1.692cm}}
\pgfpathcurveto{\pgfqpoint{0.066cm}{1.667cm}}{\pgfqpoint{0.1cm}{1.652cm}}{\pgfqpoint{0.137cm}{1.652cm}}
\pgfpathcurveto{\pgfqpoint{0.173cm}{1.652cm}}{\pgfqpoint{0.207cm}{1.667cm}}{\pgfqpoint{0.233cm}{1.692cm}}
\pgfpathcurveto{\pgfqpoint{0.259cm}{1.718cm}}{\pgfqpoint{0.273cm}{1.753cm}}{\pgfqpoint{0.273cm}{1.789cm}}
\pgfusepath{fill}
\pgfpathmoveto{\pgfqpoint{1.345cm}{1.765cm}}
\pgfpathcurveto{\pgfqpoint{1.345cm}{1.801cm}}{\pgfqpoint{1.331cm}{1.836cm}}{\pgfqpoint{1.305cm}{1.862cm}}
\pgfpathcurveto{\pgfqpoint{1.28cm}{1.887cm}}{\pgfqpoint{1.245cm}{1.902cm}}{\pgfqpoint{1.209cm}{1.902cm}}
\pgfpathcurveto{\pgfqpoint{1.172cm}{1.902cm}}{\pgfqpoint{1.138cm}{1.887cm}}{\pgfqpoint{1.112cm}{1.862cm}}
\pgfpathcurveto{\pgfqpoint{1.087cm}{1.836cm}}{\pgfqpoint{1.072cm}{1.801cm}}{\pgfqpoint{1.072cm}{1.765cm}}
\pgfpathcurveto{\pgfqpoint{1.072cm}{1.728cm}}{\pgfqpoint{1.087cm}{1.694cm}}{\pgfqpoint{1.112cm}{1.668cm}}
\pgfpathcurveto{\pgfqpoint{1.138cm}{1.642cm}}{\pgfqpoint{1.172cm}{1.628cm}}{\pgfqpoint{1.209cm}{1.628cm}}
\pgfpathcurveto{\pgfqpoint{1.245cm}{1.628cm}}{\pgfqpoint{1.28cm}{1.642cm}}{\pgfqpoint{1.305cm}{1.668cm}}
\pgfpathcurveto{\pgfqpoint{1.331cm}{1.694cm}}{\pgfqpoint{1.345cm}{1.728cm}}{\pgfqpoint{1.345cm}{1.765cm}}
\pgfusepath{fill}
\begin{pgfscope}
\pgfsetdash{}{0cm}
\pgfsetlinewidth{0.818mm}
\pgfsetroundcap
\pgfsetroundjoin
\pgfsetmiterlimit{7.0}
\pgfpathmoveto{\pgfqpoint{0.682cm}{1.065cm}}
\pgfpathlineto{\pgfqpoint{1.246cm}{0.315cm}}
\pgfpathlineto{\pgfqpoint{1.811cm}{1.065cm}}
\pgfusepath{stroke}
\end{pgfscope}
\pgfpathmoveto{\pgfqpoint{1.948cm}{1.065cm}}
\pgfpathcurveto{\pgfqpoint{1.948cm}{1.101cm}}{\pgfqpoint{1.933cm}{1.136cm}}{\pgfqpoint{1.907cm}{1.162cm}}
\pgfpathcurveto{\pgfqpoint{1.882cm}{1.187cm}}{\pgfqpoint{1.847cm}{1.202cm}}{\pgfqpoint{1.811cm}{1.202cm}}
\pgfpathcurveto{\pgfqpoint{1.775cm}{1.202cm}}{\pgfqpoint{1.74cm}{1.187cm}}{\pgfqpoint{1.714cm}{1.162cm}}
\pgfpathcurveto{\pgfqpoint{1.689cm}{1.136cm}}{\pgfqpoint{1.674cm}{1.101cm}}{\pgfqpoint{1.674cm}{1.065cm}}
\pgfpathcurveto{\pgfqpoint{1.674cm}{1.029cm}}{\pgfqpoint{1.689cm}{0.994cm}}{\pgfqpoint{1.714cm}{0.968cm}}
\pgfpathcurveto{\pgfqpoint{1.74cm}{0.942cm}}{\pgfqpoint{1.775cm}{0.928cm}}{\pgfqpoint{1.811cm}{0.928cm}}
\pgfpathcurveto{\pgfqpoint{1.847cm}{0.928cm}}{\pgfqpoint{1.882cm}{0.942cm}}{\pgfqpoint{1.907cm}{0.968cm}}
\pgfpathcurveto{\pgfqpoint{1.933cm}{0.994cm}}{\pgfqpoint{1.948cm}{1.029cm}}{\pgfqpoint{1.948cm}{1.065cm}}
\pgfusepath{fill}
\begin{pgfscope}
\pgfsetdash{}{0cm}
\pgfsetlinewidth{0.818mm}
\pgfsetmiterlimit{7.0}
\pgfpathmoveto{\pgfqpoint{1.246cm}{0.315cm}}
\pgfpathlineto{\pgfqpoint{1.244cm}{1.061cm}}
\pgfusepath{stroke}
\end{pgfscope}
\pgfpathmoveto{\pgfqpoint{1.38cm}{1.065cm}}
\pgfpathcurveto{\pgfqpoint{1.38cm}{1.101cm}}{\pgfqpoint{1.366cm}{1.136cm}}{\pgfqpoint{1.34cm}{1.162cm}}
\pgfpathcurveto{\pgfqpoint{1.315cm}{1.187cm}}{\pgfqpoint{1.28cm}{1.202cm}}{\pgfqpoint{1.244cm}{1.202cm}}
\pgfpathcurveto{\pgfqpoint{1.207cm}{1.202cm}}{\pgfqpoint{1.173cm}{1.187cm}}{\pgfqpoint{1.147cm}{1.162cm}}
\pgfpathcurveto{\pgfqpoint{1.121cm}{1.136cm}}{\pgfqpoint{1.107cm}{1.101cm}}{\pgfqpoint{1.107cm}{1.065cm}}
\pgfpathcurveto{\pgfqpoint{1.107cm}{1.029cm}}{\pgfqpoint{1.121cm}{0.994cm}}{\pgfqpoint{1.147cm}{0.968cm}}
\pgfpathcurveto{\pgfqpoint{1.173cm}{0.942cm}}{\pgfqpoint{1.207cm}{0.928cm}}{\pgfqpoint{1.244cm}{0.928cm}}
\pgfpathcurveto{\pgfqpoint{1.28cm}{0.928cm}}{\pgfqpoint{1.315cm}{0.942cm}}{\pgfqpoint{1.34cm}{0.968cm}}
\pgfpathcurveto{\pgfqpoint{1.366cm}{0.994cm}}{\pgfqpoint{1.38cm}{1.029cm}}{\pgfqpoint{1.38cm}{1.065cm}}
\pgfusepath{fill}
\begin{pgfscope}
\pgfsetdash{}{0cm}
\pgfsetlinewidth{0.818mm}
\pgfsetmiterlimit{4.0}
\pgfpathmoveto{\pgfqpoint{1.383cm}{0.178cm}}
\pgfpathcurveto{\pgfqpoint{1.383cm}{0.214cm}}{\pgfqpoint{1.369cm}{0.249cm}}{\pgfqpoint{1.343cm}{0.275cm}}
\pgfpathcurveto{\pgfqpoint{1.317cm}{0.3cm}}{\pgfqpoint{1.283cm}{0.315cm}}{\pgfqpoint{1.246cm}{0.315cm}}
\pgfpathcurveto{\pgfqpoint{1.21cm}{0.315cm}}{\pgfqpoint{1.175cm}{0.3cm}}{\pgfqpoint{1.15cm}{0.275cm}}
\pgfpathcurveto{\pgfqpoint{1.124cm}{0.249cm}}{\pgfqpoint{1.11cm}{0.214cm}}{\pgfqpoint{1.11cm}{0.178cm}}
\pgfpathcurveto{\pgfqpoint{1.11cm}{0.141cm}}{\pgfqpoint{1.124cm}{0.107cm}}{\pgfqpoint{1.15cm}{0.081cm}}
\pgfpathcurveto{\pgfqpoint{1.175cm}{0.055cm}}{\pgfqpoint{1.21cm}{0.041cm}}{\pgfqpoint{1.246cm}{0.041cm}}
\pgfpathcurveto{\pgfqpoint{1.283cm}{0.041cm}}{\pgfqpoint{1.317cm}{0.055cm}}{\pgfqpoint{1.343cm}{0.081cm}}
\pgfpathcurveto{\pgfqpoint{1.369cm}{0.107cm}}{\pgfqpoint{1.383cm}{0.141cm}}{\pgfqpoint{1.383cm}{0.178cm}}
\pgfusepath{stroke}
\end{pgfscope}
\end{pgfscope}
\end{pgfscope}
\end{pgfscope}
\end{tikzpicture}}} Y_{\varepsilon} + 3
     (\tilde{b}_{\varepsilon} - b_{\varepsilon}) Y_{\varepsilon} +
     \bar{C}_{\varepsilon} (Y_{\varepsilon}, 3 \llbracket X_{\varepsilon}^2
     \rrbracket, 3 \llbracket X_{\varepsilon}^2 \rrbracket) - 3 \llbracket
     X_{\varepsilon}^2 \rrbracket \circ \LL_{\varepsilon}^{- 1} \left( 3
     \UU^{\varepsilon}_{\leqslant} \llbracket X_{\varepsilon}^2 \rrbracket
     \succ Y_{\varepsilon} \right) .
     \end{equation}
\end{lemma}

\begin{proof}
  Noting that {\eqref{eq:phi-eq}} is of the form $\LL_{\varepsilon}
  \phi_{\varepsilon} +\lambda \phi_{\varepsilon}^3 = U_{\varepsilon}$, we may test
  this equation by $\rho^4 \phi_{\varepsilon}$ to deduce
  \[ \frac{1}{2} \partial_t \langle \rho^2 \phi_{\varepsilon}, \rho^2
     \phi_{\varepsilon} \rangle_{\varepsilon} + \lambda \langle \rho^2 \phi_{\varepsilon}, \rho^2
     \phi_{\varepsilon}^3 \rangle_{\varepsilon} = \Phi_{\rho^4, \varepsilon} + \Psi_{\rho^4,
     \varepsilon}, \]
  with
  \[ \Phi_{\rho^4, \varepsilon} \assign \langle \rho^4 \phi_{\varepsilon}, -
     \Q_{\varepsilon} \phi_{\varepsilon} - 3\lambda \llbracket X_{\varepsilon}^2
     \rrbracket \succ \phi_{\varepsilon} - 3 \lambda\llbracket X_{\varepsilon}^2
     \rrbracket \circ \phi_{\varepsilon} - 3\lambda^{2} b_{\varepsilon}
     \phi_{\varepsilon} \rangle_{\varepsilon}, \]
  and
  \[ \Psi_{\rho^4, \varepsilon} \assign \langle \rho^4 \phi_{\varepsilon}, - 3\lambda
     \llbracket X_{\varepsilon}^2 \rrbracket \prec (Y_{\varepsilon} +
     \phi_{\varepsilon}) - 3\lambda X_{\varepsilon} (Y_{\varepsilon} +
     \phi_{\varepsilon})^2 -\lambda Y_{\varepsilon}^3 - 3\lambda Y_{\varepsilon}^2
     \phi_{\varepsilon} - 3\lambda Y_{\varepsilon} \phi_{\varepsilon}^2 \rangle_{\varepsilon} \]
  \[ + \langle \rho^4 \phi_{\varepsilon}, - 3\lambda \left(
     \UU^{\varepsilon}_{\leqslant L} \llbracket X_{\varepsilon}^2 \rrbracket
     \right) \succ Y_{\varepsilon} - 3\lambda \llbracket X_{\varepsilon}^2 \rrbracket
     \circ Y_{\varepsilon} - 3\lambda^2 b_{\varepsilon} (X_{\varepsilon} +
     Y_{\varepsilon}) \rangle_{\varepsilon} . \]
  We use the fact that $(f \succ)$ is an approximate adjoint to $(f \circ)$
  according to Lemma~\ref{lem:dual1} to rewrite the resonant term as
  \[ \langle \rho^4 \phi_{\varepsilon}, - 3\lambda \llbracket X_{\varepsilon}^2
     \rrbracket \circ \phi_{\varepsilon} \rangle_{\varepsilon} = \langle \rho^4
     \phi_{\varepsilon}, - 3\lambda \llbracket X_{\varepsilon}^2 \rrbracket \succ
     \phi_{\varepsilon} \rangle_{\varepsilon} + D_{\rho^4, \varepsilon} (\phi_{\varepsilon},
     - 3\lambda \llbracket X_{\varepsilon}^2 \rrbracket, \phi_{\varepsilon}), \]
  and use the definition of $\psi$ in {\eqref{eq:psi1}} to rewrite
  $\Phi_{\rho, \varepsilon}$ as
  \[ \Phi_{\rho^4, \varepsilon} = \langle \rho^4 \psi_{\varepsilon}, -
     \Q_{\varepsilon} \psi_{\varepsilon} \rangle_{\varepsilon} + \left\langle \left[
     \Q_{\varepsilon}, \rho^4 \right] \Q_{\varepsilon}^{- 1} [3\lambda \llbracket
     X_{\varepsilon}^2 \rrbracket \succ \phi_{\varepsilon}],
     \psi_{\varepsilon} \right\rangle_{\varepsilon} \]
  \[ + \langle \rho^4 [3\lambda \llbracket X_{\varepsilon}^2 \rrbracket \succ
     \phi_{\varepsilon}], \Q_{\varepsilon}^{- 1} [3\lambda \llbracket
     X_{\varepsilon}^2 \rrbracket \succ \phi_{\varepsilon}] \rangle_{\varepsilon} - 3\lambda^2
     b_{\varepsilon} \langle \rho^4 \phi_{\varepsilon}, \phi_{\varepsilon}
     \rangle_{\varepsilon} + D_{\rho^4, \varepsilon} (\phi_{\varepsilon}, - 3 \lambda\llbracket
     X_{\varepsilon}^2 \rrbracket, \phi_{\varepsilon}) . \]
  For the first term we write
  \[ \langle \rho^4 \psi_{\varepsilon}, - \Q_{\varepsilon} \psi_{\varepsilon}
     \rangle_{\varepsilon} = - m^{2} \langle \rho^4 \psi_{\varepsilon}, \psi_{\varepsilon}
     \rangle_{\varepsilon} - \langle \rho^4 \nabla_{\varepsilon} \psi_{\varepsilon},
     \nabla_{\varepsilon} \psi_{\varepsilon} \rangle_{\varepsilon} - \langle
     [\nabla_{\varepsilon}, \rho^4] \psi_{\varepsilon}, \nabla_{\varepsilon}
     \psi_{\varepsilon} \rangle_{\varepsilon} . \]
  Next, we use again Lemma~\ref{lem:dual1} to simplify the quadratic term as
  \[ \langle \rho^4 [3\lambda \llbracket X_{\varepsilon}^2 \rrbracket \succ
     \phi_{\varepsilon}], \Q_{\varepsilon}^{- 1} [3\lambda \llbracket
     X_{\varepsilon}^2 \rrbracket \succ \phi_{\varepsilon}] \rangle_{\varepsilon} =
     \left\langle \rho^4 \phi_{\varepsilon}, 3\lambda \llbracket X_{\varepsilon}^2
     \rrbracket \circ \Q_{\varepsilon}^{- 1} [3\lambda \llbracket X_{\varepsilon}^2
     \rrbracket \succ \phi_{\varepsilon}] \right\rangle_{\varepsilon} \]
  \[ + D_{\rho^4,\varepsilon} \left( \phi_{\varepsilon}, 3\lambda \llbracket X_{\varepsilon}^2
     \rrbracket, \Q_{\varepsilon}^{- 1} [3\lambda \llbracket X_{\varepsilon}^2
     \rrbracket \succ \phi_{\varepsilon}] \right), \]
  hence Lemma~\ref{lem:comm1} leads to
  \[ = \left\langle \rho^4 \phi_{\varepsilon}^2, 9\lambda^2 \llbracket
     X_{\varepsilon}^2 \rrbracket \circ \Q_{\varepsilon}^{- 1} \llbracket
     X_{\varepsilon}^2 \rrbracket \right\rangle_{\varepsilon} + \langle \rho^4
     \phi_{\varepsilon}, \tilde{C}_{\varepsilon} (\phi, 3\lambda \llbracket
     X_{\varepsilon}^2 \rrbracket, 3\lambda \llbracket X_{\varepsilon}^2 \rrbracket)
     \rangle_{\varepsilon} \]
  \[ + D_{\rho^4,\varepsilon} \left( \phi_{\varepsilon}, 3\lambda \llbracket X_{\varepsilon}^2
     \rrbracket, \Q_{\varepsilon}^{- 1} [3\lambda \llbracket X_{\varepsilon}^2
     \rrbracket \succ \phi_{\varepsilon}] \right) . \]
  We conclude that
  \[ \Phi_{\rho^4, \varepsilon} = - m^{2} \langle \rho^4 \psi_{\varepsilon},
     \psi_{\varepsilon} \rangle_{\varepsilon} - \langle \rho^4 \nabla_{\varepsilon}
     \psi_{\varepsilon}, \nabla_{\varepsilon} \psi_{\varepsilon} \rangle_{\varepsilon} 
     - \langle [\nabla_{\varepsilon}, \rho^4] \psi_{\varepsilon},
     \nabla_{\varepsilon} \psi_{\varepsilon} \rangle_{\varepsilon} \]
  \[  + \left\langle \left[
     \Q_{\varepsilon}, \rho^4 \right] \Q_{\varepsilon}^{- 1} [3\lambda \llbracket
     X_{\varepsilon}^2 \rrbracket \succ \phi_{\varepsilon}],
     \psi_{\varepsilon} \right\rangle_{\varepsilon} + \left\langle \rho^4
     \phi_{\varepsilon}^2, 9\lambda^2 \llbracket X_{\varepsilon}^2 \rrbracket \circ
     \Q_{\varepsilon}^{- 1} \llbracket X_{\varepsilon}^2 \rrbracket - 3 \lambda^2
     b_{\varepsilon} \right\rangle_{\varepsilon} \]
  \[ + D_{\rho^4, \varepsilon} (\phi_{\varepsilon}, - 3\lambda \llbracket
     X_{\varepsilon}^2 \rrbracket, \phi_{\varepsilon}) + \langle \rho^4
     \phi_{\varepsilon}, \tilde{C}_{\varepsilon} (\phi_{\varepsilon}, 3\lambda
     \llbracket X_{\varepsilon}^2 \rrbracket, 3\lambda \llbracket X_{\varepsilon}^2
     \rrbracket) \rangle_{\varepsilon} \]
  \[ + D_{\rho^4, \varepsilon} \left( \phi_{\varepsilon}, 3\lambda \llbracket
     X_{\varepsilon}^2 \rrbracket, \Q_{\varepsilon}^{- 1} [3\lambda \llbracket
     X_{\varepsilon}^2 \rrbracket \succ \phi_{\varepsilon}] \right) . \]
  As the next step, we justify the definition of the resonant product
  appearing in $\Psi_{\rho^4, \varepsilon}$ and show that it is given by
  $Z_{\varepsilon}$ from the statement of the lemma. To this end, let
  \[ Z_{\varepsilon} \assign - 3\lambda^{-1} \llbracket X_{\varepsilon}^2 \rrbracket \circ
     Y_{\varepsilon} - 3 b_{\varepsilon} (X_{\varepsilon} + Y_{\varepsilon}),
  \]
  and recall the definition of $Y_{M,\varepsilon}$ \eqref{eq:YY}. Hence by Lemma~\ref{lem:comm1}
  \[ Z_{\varepsilon} = 3 \llbracket X_{\varepsilon}^2 \rrbracket \circ
     X_{\varepsilon}^{\!\resizebox{0.6em}{!}{
\begin{tikzpicture}
\pgfpathmoveto{\pgfqpoint{0cm}{-0.035cm}}
\pgfpathlineto{\pgfqpoint{1.376cm}{-0.035cm}}
\pgfpathlineto{\pgfqpoint{1.376cm}{1.552cm}}
\pgfpathlineto{\pgfqpoint{0cm}{1.552cm}}
\pgfpathclose
\pgfusepath{clip}
\begin{pgfscope}
\begin{pgfscope}
\pgfpathmoveto{\pgfqpoint{0cm}{-0.035cm}}
\pgfpathlineto{\pgfqpoint{1.376cm}{-0.035cm}}
\pgfpathlineto{\pgfqpoint{1.376cm}{1.552cm}}
\pgfpathlineto{\pgfqpoint{0cm}{1.552cm}}
\pgfpathclose
\pgfusepath{clip}
\begin{pgfscope}
\begin{pgfscope}
\pgfsetdash{}{0cm}
\pgfsetlinewidth{0.818mm}
\pgfsetroundcap
\pgfsetroundjoin
\pgfsetmiterlimit{7.0}
\definecolor{eps2pgf_color}{gray}{0}\pgfsetstrokecolor{eps2pgf_color}\pgfsetfillcolor{eps2pgf_color}
\pgfpathmoveto{\pgfqpoint{0.117cm}{1.421cm}}
\pgfpathlineto{\pgfqpoint{0.682cm}{0.671cm}}
\pgfpathlineto{\pgfqpoint{1.246cm}{1.421cm}}
\pgfusepath{stroke}
\end{pgfscope}
\definecolor{eps2pgf_color}{gray}{0}\pgfsetstrokecolor{eps2pgf_color}\pgfsetfillcolor{eps2pgf_color}
\pgfpathmoveto{\pgfqpoint{0.273cm}{1.395cm}}
\pgfpathcurveto{\pgfqpoint{0.273cm}{1.432cm}}{\pgfqpoint{0.259cm}{1.467cm}}{\pgfqpoint{0.233cm}{1.492cm}}
\pgfpathcurveto{\pgfqpoint{0.207cm}{1.518cm}}{\pgfqpoint{0.173cm}{1.532cm}}{\pgfqpoint{0.137cm}{1.532cm}}
\pgfpathcurveto{\pgfqpoint{0.1cm}{1.532cm}}{\pgfqpoint{0.066cm}{1.518cm}}{\pgfqpoint{0.04cm}{1.492cm}}
\pgfpathcurveto{\pgfqpoint{0.014cm}{1.467cm}}{\pgfqpoint{0cm}{1.432cm}}{\pgfqpoint{0cm}{1.395cm}}
\pgfpathcurveto{\pgfqpoint{0cm}{1.359cm}}{\pgfqpoint{0.014cm}{1.324cm}}{\pgfqpoint{0.04cm}{1.299cm}}
\pgfpathcurveto{\pgfqpoint{0.066cm}{1.273cm}}{\pgfqpoint{0.1cm}{1.258cm}}{\pgfqpoint{0.137cm}{1.258cm}}
\pgfpathcurveto{\pgfqpoint{0.173cm}{1.258cm}}{\pgfqpoint{0.207cm}{1.273cm}}{\pgfqpoint{0.233cm}{1.299cm}}
\pgfpathcurveto{\pgfqpoint{0.259cm}{1.324cm}}{\pgfqpoint{0.273cm}{1.359cm}}{\pgfqpoint{0.273cm}{1.395cm}}
\pgfusepath{fill}
\begin{pgfscope}
\pgfsetdash{}{0cm}
\pgfsetlinewidth{0.818mm}
\pgfsetmiterlimit{7.0}
\pgfpathmoveto{\pgfqpoint{0.682cm}{0.671cm}}
\pgfpathlineto{\pgfqpoint{0.679cm}{1.418cm}}
\pgfusepath{stroke}
\end{pgfscope}
\pgfpathmoveto{\pgfqpoint{0.815cm}{1.399cm}}
\pgfpathcurveto{\pgfqpoint{0.815cm}{1.435cm}}{\pgfqpoint{0.801cm}{1.47cm}}{\pgfqpoint{0.775cm}{1.496cm}}
\pgfpathcurveto{\pgfqpoint{0.75cm}{1.521cm}}{\pgfqpoint{0.715cm}{1.536cm}}{\pgfqpoint{0.679cm}{1.536cm}}
\pgfpathcurveto{\pgfqpoint{0.643cm}{1.536cm}}{\pgfqpoint{0.608cm}{1.521cm}}{\pgfqpoint{0.582cm}{1.496cm}}
\pgfpathcurveto{\pgfqpoint{0.557cm}{1.47cm}}{\pgfqpoint{0.542cm}{1.435cm}}{\pgfqpoint{0.542cm}{1.399cm}}
\pgfpathcurveto{\pgfqpoint{0.542cm}{1.363cm}}{\pgfqpoint{0.557cm}{1.328cm}}{\pgfqpoint{0.582cm}{1.302cm}}
\pgfpathcurveto{\pgfqpoint{0.608cm}{1.276cm}}{\pgfqpoint{0.643cm}{1.262cm}}{\pgfqpoint{0.679cm}{1.262cm}}
\pgfpathcurveto{\pgfqpoint{0.715cm}{1.262cm}}{\pgfqpoint{0.75cm}{1.276cm}}{\pgfqpoint{0.775cm}{1.302cm}}
\pgfpathcurveto{\pgfqpoint{0.801cm}{1.328cm}}{\pgfqpoint{0.815cm}{1.363cm}}{\pgfqpoint{0.815cm}{1.399cm}}
\pgfusepath{fill}
\pgfpathmoveto{\pgfqpoint{1.345cm}{1.371cm}}
\pgfpathcurveto{\pgfqpoint{1.345cm}{1.408cm}}{\pgfqpoint{1.331cm}{1.442cm}}{\pgfqpoint{1.305cm}{1.468cm}}
\pgfpathcurveto{\pgfqpoint{1.28cm}{1.494cm}}{\pgfqpoint{1.245cm}{1.508cm}}{\pgfqpoint{1.209cm}{1.508cm}}
\pgfpathcurveto{\pgfqpoint{1.172cm}{1.508cm}}{\pgfqpoint{1.138cm}{1.494cm}}{\pgfqpoint{1.112cm}{1.468cm}}
\pgfpathcurveto{\pgfqpoint{1.087cm}{1.442cm}}{\pgfqpoint{1.072cm}{1.408cm}}{\pgfqpoint{1.072cm}{1.371cm}}
\pgfpathcurveto{\pgfqpoint{1.072cm}{1.335cm}}{\pgfqpoint{1.087cm}{1.3cm}}{\pgfqpoint{1.112cm}{1.274cm}}
\pgfpathcurveto{\pgfqpoint{1.138cm}{1.249cm}}{\pgfqpoint{1.172cm}{1.234cm}}{\pgfqpoint{1.209cm}{1.234cm}}
\pgfpathcurveto{\pgfqpoint{1.245cm}{1.234cm}}{\pgfqpoint{1.28cm}{1.249cm}}{\pgfqpoint{1.305cm}{1.274cm}}
\pgfpathcurveto{\pgfqpoint{1.331cm}{1.3cm}}{\pgfqpoint{1.345cm}{1.335cm}}{\pgfqpoint{1.345cm}{1.371cm}}
\pgfusepath{fill}
\begin{pgfscope}
\pgfsetdash{}{0cm}
\pgfsetlinewidth{0.818mm}
\pgfsetroundcap
\pgfsetmiterlimit{4.0}
\pgfpathmoveto{\pgfqpoint{0.682cm}{0.671cm}}
\pgfpathlineto{\pgfqpoint{0.682cm}{0.042cm}}
\pgfusepath{stroke}
\end{pgfscope}
\end{pgfscope}
\end{pgfscope}
\end{pgfscope}
\end{tikzpicture}}} - 3 b_{\varepsilon} X_{\varepsilon} + 3
     \llbracket X_{\varepsilon}^2 \rrbracket \circ \LL_{\varepsilon}^{- 1} (3
     \llbracket X_{\varepsilon}^2 \rrbracket \succ Y_{\varepsilon}) - 3
     b_{\varepsilon} Y_{\varepsilon} \]
  \[ - 3 \llbracket X_{\varepsilon}^2 \rrbracket \circ \LL_{\varepsilon}^{- 1}
     ( 3 \UU^{\varepsilon}_{\leqslant} \llbracket X_{\varepsilon}^2
     \rrbracket \succ Y_{\varepsilon} ) \]
  \[ = (3 \llbracket X_{\varepsilon}^2 \rrbracket \circ
     X_{\varepsilon}^{\!\resizebox{0.6em}{!}{
\begin{tikzpicture}
\pgfpathmoveto{\pgfqpoint{0cm}{-0.035cm}}
\pgfpathlineto{\pgfqpoint{1.376cm}{-0.035cm}}
\pgfpathlineto{\pgfqpoint{1.376cm}{1.552cm}}
\pgfpathlineto{\pgfqpoint{0cm}{1.552cm}}
\pgfpathclose
\pgfusepath{clip}
\begin{pgfscope}
\begin{pgfscope}
\pgfpathmoveto{\pgfqpoint{0cm}{-0.035cm}}
\pgfpathlineto{\pgfqpoint{1.376cm}{-0.035cm}}
\pgfpathlineto{\pgfqpoint{1.376cm}{1.552cm}}
\pgfpathlineto{\pgfqpoint{0cm}{1.552cm}}
\pgfpathclose
\pgfusepath{clip}
\begin{pgfscope}
\begin{pgfscope}
\pgfsetdash{}{0cm}
\pgfsetlinewidth{0.818mm}
\pgfsetroundcap
\pgfsetroundjoin
\pgfsetmiterlimit{7.0}
\definecolor{eps2pgf_color}{gray}{0}\pgfsetstrokecolor{eps2pgf_color}\pgfsetfillcolor{eps2pgf_color}
\pgfpathmoveto{\pgfqpoint{0.117cm}{1.421cm}}
\pgfpathlineto{\pgfqpoint{0.682cm}{0.671cm}}
\pgfpathlineto{\pgfqpoint{1.246cm}{1.421cm}}
\pgfusepath{stroke}
\end{pgfscope}
\definecolor{eps2pgf_color}{gray}{0}\pgfsetstrokecolor{eps2pgf_color}\pgfsetfillcolor{eps2pgf_color}
\pgfpathmoveto{\pgfqpoint{0.273cm}{1.395cm}}
\pgfpathcurveto{\pgfqpoint{0.273cm}{1.432cm}}{\pgfqpoint{0.259cm}{1.467cm}}{\pgfqpoint{0.233cm}{1.492cm}}
\pgfpathcurveto{\pgfqpoint{0.207cm}{1.518cm}}{\pgfqpoint{0.173cm}{1.532cm}}{\pgfqpoint{0.137cm}{1.532cm}}
\pgfpathcurveto{\pgfqpoint{0.1cm}{1.532cm}}{\pgfqpoint{0.066cm}{1.518cm}}{\pgfqpoint{0.04cm}{1.492cm}}
\pgfpathcurveto{\pgfqpoint{0.014cm}{1.467cm}}{\pgfqpoint{0cm}{1.432cm}}{\pgfqpoint{0cm}{1.395cm}}
\pgfpathcurveto{\pgfqpoint{0cm}{1.359cm}}{\pgfqpoint{0.014cm}{1.324cm}}{\pgfqpoint{0.04cm}{1.299cm}}
\pgfpathcurveto{\pgfqpoint{0.066cm}{1.273cm}}{\pgfqpoint{0.1cm}{1.258cm}}{\pgfqpoint{0.137cm}{1.258cm}}
\pgfpathcurveto{\pgfqpoint{0.173cm}{1.258cm}}{\pgfqpoint{0.207cm}{1.273cm}}{\pgfqpoint{0.233cm}{1.299cm}}
\pgfpathcurveto{\pgfqpoint{0.259cm}{1.324cm}}{\pgfqpoint{0.273cm}{1.359cm}}{\pgfqpoint{0.273cm}{1.395cm}}
\pgfusepath{fill}
\begin{pgfscope}
\pgfsetdash{}{0cm}
\pgfsetlinewidth{0.818mm}
\pgfsetmiterlimit{7.0}
\pgfpathmoveto{\pgfqpoint{0.682cm}{0.671cm}}
\pgfpathlineto{\pgfqpoint{0.679cm}{1.418cm}}
\pgfusepath{stroke}
\end{pgfscope}
\pgfpathmoveto{\pgfqpoint{0.815cm}{1.399cm}}
\pgfpathcurveto{\pgfqpoint{0.815cm}{1.435cm}}{\pgfqpoint{0.801cm}{1.47cm}}{\pgfqpoint{0.775cm}{1.496cm}}
\pgfpathcurveto{\pgfqpoint{0.75cm}{1.521cm}}{\pgfqpoint{0.715cm}{1.536cm}}{\pgfqpoint{0.679cm}{1.536cm}}
\pgfpathcurveto{\pgfqpoint{0.643cm}{1.536cm}}{\pgfqpoint{0.608cm}{1.521cm}}{\pgfqpoint{0.582cm}{1.496cm}}
\pgfpathcurveto{\pgfqpoint{0.557cm}{1.47cm}}{\pgfqpoint{0.542cm}{1.435cm}}{\pgfqpoint{0.542cm}{1.399cm}}
\pgfpathcurveto{\pgfqpoint{0.542cm}{1.363cm}}{\pgfqpoint{0.557cm}{1.328cm}}{\pgfqpoint{0.582cm}{1.302cm}}
\pgfpathcurveto{\pgfqpoint{0.608cm}{1.276cm}}{\pgfqpoint{0.643cm}{1.262cm}}{\pgfqpoint{0.679cm}{1.262cm}}
\pgfpathcurveto{\pgfqpoint{0.715cm}{1.262cm}}{\pgfqpoint{0.75cm}{1.276cm}}{\pgfqpoint{0.775cm}{1.302cm}}
\pgfpathcurveto{\pgfqpoint{0.801cm}{1.328cm}}{\pgfqpoint{0.815cm}{1.363cm}}{\pgfqpoint{0.815cm}{1.399cm}}
\pgfusepath{fill}
\pgfpathmoveto{\pgfqpoint{1.345cm}{1.371cm}}
\pgfpathcurveto{\pgfqpoint{1.345cm}{1.408cm}}{\pgfqpoint{1.331cm}{1.442cm}}{\pgfqpoint{1.305cm}{1.468cm}}
\pgfpathcurveto{\pgfqpoint{1.28cm}{1.494cm}}{\pgfqpoint{1.245cm}{1.508cm}}{\pgfqpoint{1.209cm}{1.508cm}}
\pgfpathcurveto{\pgfqpoint{1.172cm}{1.508cm}}{\pgfqpoint{1.138cm}{1.494cm}}{\pgfqpoint{1.112cm}{1.468cm}}
\pgfpathcurveto{\pgfqpoint{1.087cm}{1.442cm}}{\pgfqpoint{1.072cm}{1.408cm}}{\pgfqpoint{1.072cm}{1.371cm}}
\pgfpathcurveto{\pgfqpoint{1.072cm}{1.335cm}}{\pgfqpoint{1.087cm}{1.3cm}}{\pgfqpoint{1.112cm}{1.274cm}}
\pgfpathcurveto{\pgfqpoint{1.138cm}{1.249cm}}{\pgfqpoint{1.172cm}{1.234cm}}{\pgfqpoint{1.209cm}{1.234cm}}
\pgfpathcurveto{\pgfqpoint{1.245cm}{1.234cm}}{\pgfqpoint{1.28cm}{1.249cm}}{\pgfqpoint{1.305cm}{1.274cm}}
\pgfpathcurveto{\pgfqpoint{1.331cm}{1.3cm}}{\pgfqpoint{1.345cm}{1.335cm}}{\pgfqpoint{1.345cm}{1.371cm}}
\pgfusepath{fill}
\begin{pgfscope}
\pgfsetdash{}{0cm}
\pgfsetlinewidth{0.818mm}
\pgfsetroundcap
\pgfsetmiterlimit{4.0}
\pgfpathmoveto{\pgfqpoint{0.682cm}{0.671cm}}
\pgfpathlineto{\pgfqpoint{0.682cm}{0.042cm}}
\pgfusepath{stroke}
\end{pgfscope}
\end{pgfscope}
\end{pgfscope}
\end{pgfscope}
\end{tikzpicture}}} - 3 b_{\varepsilon} X_{\varepsilon}) + (
     3 \llbracket X_{\varepsilon}^2 \rrbracket \circ \LL_{\varepsilon}^{- 1} 3
     \llbracket X_{\varepsilon}^2 \rrbracket - 3 \tilde{b}_{\varepsilon}
     ) Y_{\varepsilon} + 3 (\tilde{b}_{\varepsilon} - b_{\varepsilon})
     Y_{\varepsilon} \]
  \[ + \bar{C}_{\varepsilon} (Y_{\varepsilon}, 3 \llbracket X_{\varepsilon}^2
     \rrbracket, 3 \llbracket X_{\varepsilon}^2 \rrbracket) - 3 \llbracket
     X_{\varepsilon}^2 \rrbracket \circ \LL_{\varepsilon}^{- 1} \left( 3
     \UU_{\leqslant} \llbracket X_{\varepsilon}^2 \rrbracket \succ
     Y_{\varepsilon} \right), \]
  which is the desired formula. In this formulation we clearly see
  the structure of the renormalization and the appropriate combinations of
  resonant products and the counterterms.
\end{proof}

As the next step, we estimate the new stochastic terms appearing in Lemma
\ref{lem:energy12}. Here and in the sequel, $\vartheta=O(\kappa)>0$ denotes a generic small constant which changes from line to line.

\begin{lemma}
  \label{lem:Z}
It holds true
  \[ \| Z_{\varepsilon} (t) \|_{\CC^{- 1 / 2 - \kappa, \varepsilon}
     (\rho^{\sigma})} \lesssim (1+\lambda |\log t| +\lambda^2)  \|     \mathbb{X}_{\varepsilon}\|^{7+\vartheta}, \]
  \[ \| X_{\varepsilon} Y_{\varepsilon} \|_{C_T \CC^{- 1 / 2 - \kappa,
     \varepsilon} (\rho^{\sigma})}  \lesssim (\lambda+\lambda^2)
    \|     \mathbb{X}_{\varepsilon}\|^{6}, \]
     \[ \| X_{\varepsilon} Y_{\varepsilon}^2
     \|_{C_T \CC^{- 1 / 2 - \kappa, \varepsilon} (\rho^{\sigma})} \lesssim (\lambda^{2}+\lambda^3)
     \|     \mathbb{X}_{\varepsilon}\|^{9}. \]
\end{lemma}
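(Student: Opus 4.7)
The plan is to treat each term separately, relying on the defining equation \eqref{eq:YY} for $Y_\varepsilon$ to replace those analytically ill-defined products (where the sum of regularities is non-positive) by well-defined renormalized stochastic objects and commutators that are already controlled by $\|\mathbb{X}_\varepsilon\|$.

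For the bound on $Z_\varepsilon$, I would go through the decomposition given in \eqref{eq:def-Z} term by term. The first summand $\tthreethreer{X_\varepsilon}$ is already part of $\mathbb{X}_\varepsilon$ and is directly bounded in $\CC^{-1/2-\kappa,\varepsilon}(\rho^\sigma)$ by $\|\mathbb{X}_\varepsilon\|^5$. The product $\ttwothreer{\tilde X_\varepsilon}\, Y_\varepsilon$ is well-defined since $-\kappa+(1/2-\kappa)>0$, and using the paraproduct estimates it is bounded by $\|\ttwothreer{\tilde X_\varepsilon}\|_{C_T\CC^{-\kappa,\varepsilon}(\rho^\sigma)}\|Y_\varepsilon\|_{C_T\CC^{1/2-\kappa,\varepsilon}(\rho^\sigma)}$, which via Lemma~\ref{lem:Y1} and the definition of $\|\mathbb X_\varepsilon\|$ yields $\lambda\|\mathbb X_\varepsilon\|^{7}$. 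The term $3(\tilde b_\varepsilon-b_\varepsilon) Y_\varepsilon$ inherits the $|\log t|$ blow-up from the bound $|\tilde b_\varepsilon(t)-b_\varepsilon|\lesssim|\log t|$ recalled in Section~\ref{ssec:stoch}, contributing $\lambda|\log t|\,\|\mathbb X_\varepsilon\|^{3}$. The commutator $\bar C_\varepsilon(Y_\varepsilon, 3\llbracket X^2_\varepsilon\rrbracket, 3\llbracket X^2_\varepsilon\rrbracket)$ is controlled by the commutator estimates from Lemma~\ref{lem:comm1} in terms of $\|Y_\varepsilon\|_{C_T\CC^{1/2-\kappa,\varepsilon}(\rho^\sigma)}$ and two factors of $\|\llbracket X_\varepsilon^2\rrbracket\|$, giving again $\lambda\|\mathbb X_\varepsilon\|^{7}$. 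The main technical point is the last summand $3\llbracket X_\varepsilon^2\rrbracket\circ \LL_\varepsilon^{-1}(3\UU^\varepsilon_{\leqslant}\llbracket X_\varepsilon^2\rrbracket\succ Y_\varepsilon)$: here the resonant product is only borderline, and I would combine Schauder estimates for $\LL_\varepsilon^{-1}$ (two derivatives gained) with the localiser bound of Lemma~\ref{lem:loc} applied to $\UU^\varepsilon_{\leqslant}$. By \eqref{eq:U11} the price is a factor $2^{L/2}\lesssim 1+\lambda\|\mathbb X_\varepsilon\|^2$, which combined with $\|\llbracket X^2_\varepsilon\rrbracket\|^2\|Y_\varepsilon\|$ produces the $\lambda^{2}\|\mathbb X_\varepsilon\|^{7+\vartheta}$ contribution and accounts for the loss $\vartheta=O(\kappa)$ coming from an additional small regularity sacrifice needed to close the resonant product. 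Summing the four types of contributions gives the stated bound.

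For the bound on $X_\varepsilon Y_\varepsilon$, the obstacle is that $X_\varepsilon\in\CC^{-1/2-\kappa,\varepsilon}$ and $Y_\varepsilon\in\CC^{1/2-\kappa,\varepsilon}$, so the sum of regularities is negative and the resonant product $X_\varepsilon\circ Y_\varepsilon$ is not a priori defined. I would decompose $X_\varepsilon Y_\varepsilon = X_\varepsilon\prec Y_\varepsilon + X_\varepsilon\succ Y_\varepsilon + X_\varepsilon\circ Y_\varepsilon$; the two paraproducts are immediate from standard estimates and contribute $\lambda\|\mathbb X_\varepsilon\|^{\cdot}$. For the resonant piece I would substitute the equation \eqref{eq:YY}:
\[
X_\varepsilon\circ Y_\varepsilon = -\lambda\, X_\varepsilon\circ\tthreeone{X_\varepsilon} - 3\lambda\, X_\varepsilon\circ\LL_\varepsilon^{-1}\bigl[(\UU^\varepsilon_{>}\llbracket X_\varepsilon^2\rrbracket)\succ Y_\varepsilon\bigr].
\]
The first piece is the renormalized object $-\lambda\,\tthreetwor{X_\varepsilon}\in\CC^{-\kappa,\varepsilon}(\rho^\sigma)$, already contained in $\mathbb X_\varepsilon$. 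The second piece is treated using the Schauder smoothing of $\LL_\varepsilon^{-1}$ together with the localiser bound of Lemma~\ref{lem:loc} on $\UU^\varepsilon_{>}$, which via \eqref{eq:U11} produces the second factor of $\lambda$, yielding altogether $(\lambda+\lambda^2)\|\mathbb X_\varepsilon\|^{6}$.

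For $X_\varepsilon Y_\varepsilon^2$, the approach is the same but one step longer. I would first expand $Y_\varepsilon^2$ using the paraproduct decomposition and the definition of $Y_\varepsilon$, producing safe pieces that pair well against $X_\varepsilon$ and one problematic resonant contribution $X_\varepsilon\circ (Y_\varepsilon\prec Y_\varepsilon + Y_\varepsilon\succ Y_\varepsilon + Y_\varepsilon\circ Y_\varepsilon)$. Since $Y_\varepsilon\in\CC^{1/2-\kappa,\varepsilon}$, the square $Y_\varepsilon^2$ is well-defined as an element of $\CC^{1/2-\kappa,\varepsilon}$, so $X_\varepsilon\cdot Y_\varepsilon^2$ again only has an ill-defined resonant part, which I would rewrite via \eqref{eq:YY} in terms of $\tthreetwor{X_\varepsilon}\, Y_\varepsilon$ (multiplication of a $\CC^{-\kappa}$ object with a $\CC^{1/2-\kappa}$ one, which is allowed) plus a Schauder/localiser remainder, picking up one more factor of $\lambda$ and one additional power of $\|\mathbb X_\varepsilon\|$ compared with the previous step, giving the claimed $(\lambda^2+\lambda^3)\|\mathbb X_\varepsilon\|^{9}$. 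The main difficulty throughout is the bookkeeping of the $L$-dependent loss from the localiser, which must be combined with \eqref{eq:U11} to trade the spectral gain for extra powers of $\lambda$ and $\|\mathbb X_\varepsilon\|$, and to ensure that the final constants do not blow up uniformly in $M,\varepsilon$.
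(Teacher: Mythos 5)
Your overall decomposition strategy matches the paper's: bound $Z_\varepsilon$ term by term from \eqref{eq:def-Z}, split $X_\varepsilon Y_\varepsilon$ and $X_\varepsilon Y_\varepsilon^2$ into paraproducts plus resonant parts, and use \eqref{eq:YY} to resolve the ill-defined resonant products into $\tthreetwor{X_\varepsilon}$ plus Schauder/localizer remainders. The structure of the argument is right.

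However, there is a genuine bookkeeping gap in your treatment of the $\UU^\varepsilon_{\leqslant}$ term in $Z_\varepsilon$. You claim the localizer costs a full factor $2^{L/2}\lesssim 1+\lambda\|\mathbb{X}_\varepsilon\|^2$, and that this combined with $\|\llbracket X^2_\varepsilon\rrbracket\|^2\|Y_\varepsilon\|$ yields $\lambda^2\|\mathbb{X}_\varepsilon\|^{7+\vartheta}$. That does not follow: multiplying $(1+\lambda\|\mathbb{X}_\varepsilon\|^2)$ by $\|\mathbb{X}_\varepsilon\|^4\cdot\lambda\|\mathbb{X}_\varepsilon\|^3$ gives $\lambda\|\mathbb{X}_\varepsilon\|^7 + \lambda^2\|\mathbb{X}_\varepsilon\|^9$, i.e.\ a power $9$, not $7+\vartheta$. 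The resolution, as in the paper, is that Lemma~\ref{lem:loc} lets you \emph{choose} how much regularity to buy from $\UU^\varepsilon_{\leqslant}$: gaining only $O(\kappa)$ regularity costs $2^{O(\kappa)L}=(2^{L/2})^{O(\kappa)}\lesssim(1+\lambda\|\mathbb{X}_\varepsilon\|^2)^{O(\kappa)}$, which, since $\|\mathbb{X}_\varepsilon\|\geq 1$, is what gives the mild loss $\vartheta=O(\kappa)$ rather than a full extra power of $\|\mathbb{X}_\varepsilon\|^2$. This distinction matters because the exponent on $\|\mathbb{X}_\varepsilon\|$ is used quantitatively later for the stretched exponential integrability.

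Two smaller points. In the $X_\varepsilon Y_\varepsilon$ bound you attribute the second factor of $\lambda$ to the localizer $\UU^\varepsilon_>$ via \eqref{eq:U11}; in fact the paper simply uses that $\UU^\varepsilon_>$ is a contraction (so it contributes nothing), and the $\lambda^2$ arises from the explicit $3\lambda$ prefactor in \eqref{eq:YY} together with $\|Y_\varepsilon\|\lesssim\lambda\|\mathbb{X}_\varepsilon\|^3$. The conclusion $(\lambda+\lambda^2)\|\mathbb{X}_\varepsilon\|^6$ is still right, but the mechanism is misattributed. Finally, for $X_\varepsilon\circ Y_\varepsilon^2$, after substituting \eqref{eq:YY} into $Y_\varepsilon\prec Y_\varepsilon$ and taking the resonant product with $X_\varepsilon$, one cannot immediately factor out $\tthreetwor{X_\varepsilon}\,Y_\varepsilon$: this requires the commutator $C_\varepsilon(Y_\varepsilon, 2\tthreeone{X_\varepsilon}, X_\varepsilon)$, bounded by Lemma~\ref{lem:comm1}. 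Your catch-all "Schauder/localiser remainder" glosses over this extra commutator estimate, which is a real step in the proof.
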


\begin{proof}
  By definition of $Z_{\varepsilon}$ and the discussion in Section~\ref{ssec:stoch}, Lemma~\ref{lem:Y1}, Lemma~\ref{lem:comm1}, Lemma~\ref{lem:loc} and {\eqref{eq:U11}} we have (since the choice of exponent
  $\sigma > 0$ of the weight corresponding to the stochastic objects is
  arbitrary, $\sigma$ changes from line to line in the sequel)
  \[ \| Z_{\varepsilon} (t) \|_{\CC^{- 1 / 2 - \kappa, \varepsilon} (\rho^{3
     \sigma})} \lesssim \| X_{\varepsilon}^{\!\resizebox{!}{.8em}{
\begin{tikzpicture}
\pgfpathmoveto{\pgfqpoint{0cm}{-0.035cm}}
\pgfpathlineto{\pgfqpoint{1.976cm}{-0.035cm}}
\pgfpathlineto{\pgfqpoint{1.976cm}{1.94cm}}
\pgfpathlineto{\pgfqpoint{0cm}{1.94cm}}
\pgfpathclose
\pgfusepath{clip}
\begin{pgfscope}
\begin{pgfscope}
\pgfpathmoveto{\pgfqpoint{0cm}{-0.035cm}}
\pgfpathlineto{\pgfqpoint{1.976cm}{-0.035cm}}
\pgfpathlineto{\pgfqpoint{1.976cm}{1.94cm}}
\pgfpathlineto{\pgfqpoint{0cm}{1.94cm}}
\pgfpathclose
\pgfusepath{clip}
\begin{pgfscope}
\begin{pgfscope}
\pgfsetdash{}{0cm}
\pgfsetlinewidth{0.818mm}
\pgfsetroundcap
\pgfsetroundjoin
\pgfsetmiterlimit{7.0}
\definecolor{eps2pgf_color}{gray}{0}\pgfsetstrokecolor{eps2pgf_color}\pgfsetfillcolor{eps2pgf_color}
\pgfpathmoveto{\pgfqpoint{0.117cm}{1.815cm}}
\pgfpathlineto{\pgfqpoint{0.682cm}{1.065cm}}
\pgfpathlineto{\pgfqpoint{1.246cm}{1.815cm}}
\pgfusepath{stroke}
\end{pgfscope}
\definecolor{eps2pgf_color}{gray}{0}\pgfsetstrokecolor{eps2pgf_color}\pgfsetfillcolor{eps2pgf_color}
\pgfpathmoveto{\pgfqpoint{0.273cm}{1.789cm}}
\pgfpathcurveto{\pgfqpoint{0.273cm}{1.825cm}}{\pgfqpoint{0.259cm}{1.86cm}}{\pgfqpoint{0.233cm}{1.886cm}}
\pgfpathcurveto{\pgfqpoint{0.207cm}{1.912cm}}{\pgfqpoint{0.173cm}{1.926cm}}{\pgfqpoint{0.137cm}{1.926cm}}
\pgfpathcurveto{\pgfqpoint{0.1cm}{1.926cm}}{\pgfqpoint{0.066cm}{1.912cm}}{\pgfqpoint{0.04cm}{1.886cm}}
\pgfpathcurveto{\pgfqpoint{0.014cm}{1.86cm}}{\pgfqpoint{0cm}{1.825cm}}{\pgfqpoint{0cm}{1.789cm}}
\pgfpathcurveto{\pgfqpoint{0cm}{1.753cm}}{\pgfqpoint{0.014cm}{1.718cm}}{\pgfqpoint{0.04cm}{1.692cm}}
\pgfpathcurveto{\pgfqpoint{0.066cm}{1.667cm}}{\pgfqpoint{0.1cm}{1.652cm}}{\pgfqpoint{0.137cm}{1.652cm}}
\pgfpathcurveto{\pgfqpoint{0.173cm}{1.652cm}}{\pgfqpoint{0.207cm}{1.667cm}}{\pgfqpoint{0.233cm}{1.692cm}}
\pgfpathcurveto{\pgfqpoint{0.259cm}{1.718cm}}{\pgfqpoint{0.273cm}{1.753cm}}{\pgfqpoint{0.273cm}{1.789cm}}
\pgfusepath{fill}
\begin{pgfscope}
\pgfsetdash{}{0cm}
\pgfsetlinewidth{0.818mm}
\pgfsetmiterlimit{7.0}
\pgfpathmoveto{\pgfqpoint{0.682cm}{1.065cm}}
\pgfpathlineto{\pgfqpoint{0.679cm}{1.812cm}}
\pgfusepath{stroke}
\end{pgfscope}
\pgfpathmoveto{\pgfqpoint{0.815cm}{1.793cm}}
\pgfpathcurveto{\pgfqpoint{0.815cm}{1.829cm}}{\pgfqpoint{0.801cm}{1.864cm}}{\pgfqpoint{0.775cm}{1.89cm}}
\pgfpathcurveto{\pgfqpoint{0.75cm}{1.915cm}}{\pgfqpoint{0.715cm}{1.93cm}}{\pgfqpoint{0.679cm}{1.93cm}}
\pgfpathcurveto{\pgfqpoint{0.643cm}{1.93cm}}{\pgfqpoint{0.608cm}{1.915cm}}{\pgfqpoint{0.582cm}{1.89cm}}
\pgfpathcurveto{\pgfqpoint{0.557cm}{1.864cm}}{\pgfqpoint{0.542cm}{1.829cm}}{\pgfqpoint{0.542cm}{1.793cm}}
\pgfpathcurveto{\pgfqpoint{0.542cm}{1.756cm}}{\pgfqpoint{0.557cm}{1.722cm}}{\pgfqpoint{0.582cm}{1.696cm}}
\pgfpathcurveto{\pgfqpoint{0.608cm}{1.67cm}}{\pgfqpoint{0.643cm}{1.656cm}}{\pgfqpoint{0.679cm}{1.656cm}}
\pgfpathcurveto{\pgfqpoint{0.715cm}{1.656cm}}{\pgfqpoint{0.75cm}{1.67cm}}{\pgfqpoint{0.775cm}{1.696cm}}
\pgfpathcurveto{\pgfqpoint{0.801cm}{1.722cm}}{\pgfqpoint{0.815cm}{1.756cm}}{\pgfqpoint{0.815cm}{1.793cm}}
\pgfusepath{fill}
\pgfpathmoveto{\pgfqpoint{1.345cm}{1.765cm}}
\pgfpathcurveto{\pgfqpoint{1.345cm}{1.801cm}}{\pgfqpoint{1.331cm}{1.836cm}}{\pgfqpoint{1.305cm}{1.862cm}}
\pgfpathcurveto{\pgfqpoint{1.28cm}{1.887cm}}{\pgfqpoint{1.245cm}{1.902cm}}{\pgfqpoint{1.209cm}{1.902cm}}
\pgfpathcurveto{\pgfqpoint{1.172cm}{1.902cm}}{\pgfqpoint{1.138cm}{1.887cm}}{\pgfqpoint{1.112cm}{1.862cm}}
\pgfpathcurveto{\pgfqpoint{1.087cm}{1.836cm}}{\pgfqpoint{1.072cm}{1.801cm}}{\pgfqpoint{1.072cm}{1.765cm}}
\pgfpathcurveto{\pgfqpoint{1.072cm}{1.728cm}}{\pgfqpoint{1.087cm}{1.694cm}}{\pgfqpoint{1.112cm}{1.668cm}}
\pgfpathcurveto{\pgfqpoint{1.138cm}{1.642cm}}{\pgfqpoint{1.172cm}{1.628cm}}{\pgfqpoint{1.209cm}{1.628cm}}
\pgfpathcurveto{\pgfqpoint{1.245cm}{1.628cm}}{\pgfqpoint{1.28cm}{1.642cm}}{\pgfqpoint{1.305cm}{1.668cm}}
\pgfpathcurveto{\pgfqpoint{1.331cm}{1.694cm}}{\pgfqpoint{1.345cm}{1.728cm}}{\pgfqpoint{1.345cm}{1.765cm}}
\pgfusepath{fill}
\begin{pgfscope}
\pgfsetdash{}{0cm}
\pgfsetlinewidth{0.818mm}
\pgfsetroundcap
\pgfsetroundjoin
\pgfsetmiterlimit{7.0}
\pgfpathmoveto{\pgfqpoint{0.682cm}{1.065cm}}
\pgfpathlineto{\pgfqpoint{1.246cm}{0.315cm}}
\pgfpathlineto{\pgfqpoint{1.811cm}{1.065cm}}
\pgfusepath{stroke}
\end{pgfscope}
\pgfpathmoveto{\pgfqpoint{1.948cm}{1.065cm}}
\pgfpathcurveto{\pgfqpoint{1.948cm}{1.101cm}}{\pgfqpoint{1.933cm}{1.136cm}}{\pgfqpoint{1.907cm}{1.162cm}}
\pgfpathcurveto{\pgfqpoint{1.882cm}{1.187cm}}{\pgfqpoint{1.847cm}{1.202cm}}{\pgfqpoint{1.811cm}{1.202cm}}
\pgfpathcurveto{\pgfqpoint{1.775cm}{1.202cm}}{\pgfqpoint{1.74cm}{1.187cm}}{\pgfqpoint{1.714cm}{1.162cm}}
\pgfpathcurveto{\pgfqpoint{1.689cm}{1.136cm}}{\pgfqpoint{1.674cm}{1.101cm}}{\pgfqpoint{1.674cm}{1.065cm}}
\pgfpathcurveto{\pgfqpoint{1.674cm}{1.029cm}}{\pgfqpoint{1.689cm}{0.994cm}}{\pgfqpoint{1.714cm}{0.968cm}}
\pgfpathcurveto{\pgfqpoint{1.74cm}{0.942cm}}{\pgfqpoint{1.775cm}{0.928cm}}{\pgfqpoint{1.811cm}{0.928cm}}
\pgfpathcurveto{\pgfqpoint{1.847cm}{0.928cm}}{\pgfqpoint{1.882cm}{0.942cm}}{\pgfqpoint{1.907cm}{0.968cm}}
\pgfpathcurveto{\pgfqpoint{1.933cm}{0.994cm}}{\pgfqpoint{1.948cm}{1.029cm}}{\pgfqpoint{1.948cm}{1.065cm}}
\pgfusepath{fill}
\begin{pgfscope}
\pgfsetdash{}{0cm}
\pgfsetlinewidth{0.818mm}
\pgfsetmiterlimit{7.0}
\pgfpathmoveto{\pgfqpoint{1.246cm}{0.315cm}}
\pgfpathlineto{\pgfqpoint{1.244cm}{1.061cm}}
\pgfusepath{stroke}
\end{pgfscope}
\pgfpathmoveto{\pgfqpoint{1.38cm}{1.065cm}}
\pgfpathcurveto{\pgfqpoint{1.38cm}{1.101cm}}{\pgfqpoint{1.366cm}{1.136cm}}{\pgfqpoint{1.34cm}{1.162cm}}
\pgfpathcurveto{\pgfqpoint{1.315cm}{1.187cm}}{\pgfqpoint{1.28cm}{1.202cm}}{\pgfqpoint{1.244cm}{1.202cm}}
\pgfpathcurveto{\pgfqpoint{1.207cm}{1.202cm}}{\pgfqpoint{1.173cm}{1.187cm}}{\pgfqpoint{1.147cm}{1.162cm}}
\pgfpathcurveto{\pgfqpoint{1.121cm}{1.136cm}}{\pgfqpoint{1.107cm}{1.101cm}}{\pgfqpoint{1.107cm}{1.065cm}}
\pgfpathcurveto{\pgfqpoint{1.107cm}{1.029cm}}{\pgfqpoint{1.121cm}{0.994cm}}{\pgfqpoint{1.147cm}{0.968cm}}
\pgfpathcurveto{\pgfqpoint{1.173cm}{0.942cm}}{\pgfqpoint{1.207cm}{0.928cm}}{\pgfqpoint{1.244cm}{0.928cm}}
\pgfpathcurveto{\pgfqpoint{1.28cm}{0.928cm}}{\pgfqpoint{1.315cm}{0.942cm}}{\pgfqpoint{1.34cm}{0.968cm}}
\pgfpathcurveto{\pgfqpoint{1.366cm}{0.994cm}}{\pgfqpoint{1.38cm}{1.029cm}}{\pgfqpoint{1.38cm}{1.065cm}}
\pgfusepath{fill}
\begin{pgfscope}
\pgfsetdash{}{0cm}
\pgfsetlinewidth{0.818mm}
\pgfsetmiterlimit{4.0}
\pgfpathmoveto{\pgfqpoint{1.383cm}{0.178cm}}
\pgfpathcurveto{\pgfqpoint{1.383cm}{0.214cm}}{\pgfqpoint{1.369cm}{0.249cm}}{\pgfqpoint{1.343cm}{0.275cm}}
\pgfpathcurveto{\pgfqpoint{1.317cm}{0.3cm}}{\pgfqpoint{1.283cm}{0.315cm}}{\pgfqpoint{1.246cm}{0.315cm}}
\pgfpathcurveto{\pgfqpoint{1.21cm}{0.315cm}}{\pgfqpoint{1.175cm}{0.3cm}}{\pgfqpoint{1.15cm}{0.275cm}}
\pgfpathcurveto{\pgfqpoint{1.124cm}{0.249cm}}{\pgfqpoint{1.11cm}{0.214cm}}{\pgfqpoint{1.11cm}{0.178cm}}
\pgfpathcurveto{\pgfqpoint{1.11cm}{0.141cm}}{\pgfqpoint{1.124cm}{0.107cm}}{\pgfqpoint{1.15cm}{0.081cm}}
\pgfpathcurveto{\pgfqpoint{1.175cm}{0.055cm}}{\pgfqpoint{1.21cm}{0.041cm}}{\pgfqpoint{1.246cm}{0.041cm}}
\pgfpathcurveto{\pgfqpoint{1.283cm}{0.041cm}}{\pgfqpoint{1.317cm}{0.055cm}}{\pgfqpoint{1.343cm}{0.081cm}}
\pgfpathcurveto{\pgfqpoint{1.369cm}{0.107cm}}{\pgfqpoint{1.383cm}{0.141cm}}{\pgfqpoint{1.383cm}{0.178cm}}
\pgfusepath{stroke}
\end{pgfscope}
\end{pgfscope}
\end{pgfscope}
\end{pgfscope}
\end{tikzpicture}}} \|_{C_T
     \CC^{- 1 / 2 - \kappa, \varepsilon} (\rho^{3 \sigma})} + \|
     \tilde{X}_{\varepsilon}^{\!\resizebox{!}{.8em}{
\begin{tikzpicture}
\pgfpathmoveto{\pgfqpoint{0cm}{-0.035cm}}
\pgfpathlineto{\pgfqpoint{1.976cm}{-0.035cm}}
\pgfpathlineto{\pgfqpoint{1.976cm}{1.94cm}}
\pgfpathlineto{\pgfqpoint{0cm}{1.94cm}}
\pgfpathclose
\pgfusepath{clip}
\begin{pgfscope}
\begin{pgfscope}
\pgfpathmoveto{\pgfqpoint{0cm}{-0.035cm}}
\pgfpathlineto{\pgfqpoint{1.976cm}{-0.035cm}}
\pgfpathlineto{\pgfqpoint{1.976cm}{1.94cm}}
\pgfpathlineto{\pgfqpoint{0cm}{1.94cm}}
\pgfpathclose
\pgfusepath{clip}
\begin{pgfscope}
\begin{pgfscope}
\pgfsetdash{}{0cm}
\pgfsetlinewidth{0.818mm}
\pgfsetroundcap
\pgfsetroundjoin
\pgfsetmiterlimit{7.0}
\definecolor{eps2pgf_color}{gray}{0}\pgfsetstrokecolor{eps2pgf_color}\pgfsetfillcolor{eps2pgf_color}
\pgfpathmoveto{\pgfqpoint{0.117cm}{1.815cm}}
\pgfpathlineto{\pgfqpoint{0.682cm}{1.065cm}}
\pgfpathlineto{\pgfqpoint{1.246cm}{1.815cm}}
\pgfusepath{stroke}
\end{pgfscope}
\definecolor{eps2pgf_color}{gray}{0}\pgfsetstrokecolor{eps2pgf_color}\pgfsetfillcolor{eps2pgf_color}
\pgfpathmoveto{\pgfqpoint{0.273cm}{1.789cm}}
\pgfpathcurveto{\pgfqpoint{0.273cm}{1.825cm}}{\pgfqpoint{0.259cm}{1.86cm}}{\pgfqpoint{0.233cm}{1.886cm}}
\pgfpathcurveto{\pgfqpoint{0.207cm}{1.912cm}}{\pgfqpoint{0.173cm}{1.926cm}}{\pgfqpoint{0.137cm}{1.926cm}}
\pgfpathcurveto{\pgfqpoint{0.1cm}{1.926cm}}{\pgfqpoint{0.066cm}{1.912cm}}{\pgfqpoint{0.04cm}{1.886cm}}
\pgfpathcurveto{\pgfqpoint{0.014cm}{1.86cm}}{\pgfqpoint{0cm}{1.825cm}}{\pgfqpoint{0cm}{1.789cm}}
\pgfpathcurveto{\pgfqpoint{0cm}{1.753cm}}{\pgfqpoint{0.014cm}{1.718cm}}{\pgfqpoint{0.04cm}{1.692cm}}
\pgfpathcurveto{\pgfqpoint{0.066cm}{1.667cm}}{\pgfqpoint{0.1cm}{1.652cm}}{\pgfqpoint{0.137cm}{1.652cm}}
\pgfpathcurveto{\pgfqpoint{0.173cm}{1.652cm}}{\pgfqpoint{0.207cm}{1.667cm}}{\pgfqpoint{0.233cm}{1.692cm}}
\pgfpathcurveto{\pgfqpoint{0.259cm}{1.718cm}}{\pgfqpoint{0.273cm}{1.753cm}}{\pgfqpoint{0.273cm}{1.789cm}}
\pgfusepath{fill}
\pgfpathmoveto{\pgfqpoint{1.345cm}{1.765cm}}
\pgfpathcurveto{\pgfqpoint{1.345cm}{1.801cm}}{\pgfqpoint{1.331cm}{1.836cm}}{\pgfqpoint{1.305cm}{1.862cm}}
\pgfpathcurveto{\pgfqpoint{1.28cm}{1.887cm}}{\pgfqpoint{1.245cm}{1.902cm}}{\pgfqpoint{1.209cm}{1.902cm}}
\pgfpathcurveto{\pgfqpoint{1.172cm}{1.902cm}}{\pgfqpoint{1.138cm}{1.887cm}}{\pgfqpoint{1.112cm}{1.862cm}}
\pgfpathcurveto{\pgfqpoint{1.087cm}{1.836cm}}{\pgfqpoint{1.072cm}{1.801cm}}{\pgfqpoint{1.072cm}{1.765cm}}
\pgfpathcurveto{\pgfqpoint{1.072cm}{1.728cm}}{\pgfqpoint{1.087cm}{1.694cm}}{\pgfqpoint{1.112cm}{1.668cm}}
\pgfpathcurveto{\pgfqpoint{1.138cm}{1.642cm}}{\pgfqpoint{1.172cm}{1.628cm}}{\pgfqpoint{1.209cm}{1.628cm}}
\pgfpathcurveto{\pgfqpoint{1.245cm}{1.628cm}}{\pgfqpoint{1.28cm}{1.642cm}}{\pgfqpoint{1.305cm}{1.668cm}}
\pgfpathcurveto{\pgfqpoint{1.331cm}{1.694cm}}{\pgfqpoint{1.345cm}{1.728cm}}{\pgfqpoint{1.345cm}{1.765cm}}
\pgfusepath{fill}
\begin{pgfscope}
\pgfsetdash{}{0cm}
\pgfsetlinewidth{0.818mm}
\pgfsetroundcap
\pgfsetroundjoin
\pgfsetmiterlimit{7.0}
\pgfpathmoveto{\pgfqpoint{0.682cm}{1.065cm}}
\pgfpathlineto{\pgfqpoint{1.246cm}{0.315cm}}
\pgfpathlineto{\pgfqpoint{1.811cm}{1.065cm}}
\pgfusepath{stroke}
\end{pgfscope}
\pgfpathmoveto{\pgfqpoint{1.948cm}{1.065cm}}
\pgfpathcurveto{\pgfqpoint{1.948cm}{1.101cm}}{\pgfqpoint{1.933cm}{1.136cm}}{\pgfqpoint{1.907cm}{1.162cm}}
\pgfpathcurveto{\pgfqpoint{1.882cm}{1.187cm}}{\pgfqpoint{1.847cm}{1.202cm}}{\pgfqpoint{1.811cm}{1.202cm}}
\pgfpathcurveto{\pgfqpoint{1.775cm}{1.202cm}}{\pgfqpoint{1.74cm}{1.187cm}}{\pgfqpoint{1.714cm}{1.162cm}}
\pgfpathcurveto{\pgfqpoint{1.689cm}{1.136cm}}{\pgfqpoint{1.674cm}{1.101cm}}{\pgfqpoint{1.674cm}{1.065cm}}
\pgfpathcurveto{\pgfqpoint{1.674cm}{1.029cm}}{\pgfqpoint{1.689cm}{0.994cm}}{\pgfqpoint{1.714cm}{0.968cm}}
\pgfpathcurveto{\pgfqpoint{1.74cm}{0.942cm}}{\pgfqpoint{1.775cm}{0.928cm}}{\pgfqpoint{1.811cm}{0.928cm}}
\pgfpathcurveto{\pgfqpoint{1.847cm}{0.928cm}}{\pgfqpoint{1.882cm}{0.942cm}}{\pgfqpoint{1.907cm}{0.968cm}}
\pgfpathcurveto{\pgfqpoint{1.933cm}{0.994cm}}{\pgfqpoint{1.948cm}{1.029cm}}{\pgfqpoint{1.948cm}{1.065cm}}
\pgfusepath{fill}
\begin{pgfscope}
\pgfsetdash{}{0cm}
\pgfsetlinewidth{0.818mm}
\pgfsetmiterlimit{7.0}
\pgfpathmoveto{\pgfqpoint{1.246cm}{0.315cm}}
\pgfpathlineto{\pgfqpoint{1.244cm}{1.061cm}}
\pgfusepath{stroke}
\end{pgfscope}
\pgfpathmoveto{\pgfqpoint{1.38cm}{1.065cm}}
\pgfpathcurveto{\pgfqpoint{1.38cm}{1.101cm}}{\pgfqpoint{1.366cm}{1.136cm}}{\pgfqpoint{1.34cm}{1.162cm}}
\pgfpathcurveto{\pgfqpoint{1.315cm}{1.187cm}}{\pgfqpoint{1.28cm}{1.202cm}}{\pgfqpoint{1.244cm}{1.202cm}}
\pgfpathcurveto{\pgfqpoint{1.207cm}{1.202cm}}{\pgfqpoint{1.173cm}{1.187cm}}{\pgfqpoint{1.147cm}{1.162cm}}
\pgfpathcurveto{\pgfqpoint{1.121cm}{1.136cm}}{\pgfqpoint{1.107cm}{1.101cm}}{\pgfqpoint{1.107cm}{1.065cm}}
\pgfpathcurveto{\pgfqpoint{1.107cm}{1.029cm}}{\pgfqpoint{1.121cm}{0.994cm}}{\pgfqpoint{1.147cm}{0.968cm}}
\pgfpathcurveto{\pgfqpoint{1.173cm}{0.942cm}}{\pgfqpoint{1.207cm}{0.928cm}}{\pgfqpoint{1.244cm}{0.928cm}}
\pgfpathcurveto{\pgfqpoint{1.28cm}{0.928cm}}{\pgfqpoint{1.315cm}{0.942cm}}{\pgfqpoint{1.34cm}{0.968cm}}
\pgfpathcurveto{\pgfqpoint{1.366cm}{0.994cm}}{\pgfqpoint{1.38cm}{1.029cm}}{\pgfqpoint{1.38cm}{1.065cm}}
\pgfusepath{fill}
\begin{pgfscope}
\pgfsetdash{}{0cm}
\pgfsetlinewidth{0.818mm}
\pgfsetmiterlimit{4.0}
\pgfpathmoveto{\pgfqpoint{1.383cm}{0.178cm}}
\pgfpathcurveto{\pgfqpoint{1.383cm}{0.214cm}}{\pgfqpoint{1.369cm}{0.249cm}}{\pgfqpoint{1.343cm}{0.275cm}}
\pgfpathcurveto{\pgfqpoint{1.317cm}{0.3cm}}{\pgfqpoint{1.283cm}{0.315cm}}{\pgfqpoint{1.246cm}{0.315cm}}
\pgfpathcurveto{\pgfqpoint{1.21cm}{0.315cm}}{\pgfqpoint{1.175cm}{0.3cm}}{\pgfqpoint{1.15cm}{0.275cm}}
\pgfpathcurveto{\pgfqpoint{1.124cm}{0.249cm}}{\pgfqpoint{1.11cm}{0.214cm}}{\pgfqpoint{1.11cm}{0.178cm}}
\pgfpathcurveto{\pgfqpoint{1.11cm}{0.141cm}}{\pgfqpoint{1.124cm}{0.107cm}}{\pgfqpoint{1.15cm}{0.081cm}}
\pgfpathcurveto{\pgfqpoint{1.175cm}{0.055cm}}{\pgfqpoint{1.21cm}{0.041cm}}{\pgfqpoint{1.246cm}{0.041cm}}
\pgfpathcurveto{\pgfqpoint{1.283cm}{0.041cm}}{\pgfqpoint{1.317cm}{0.055cm}}{\pgfqpoint{1.343cm}{0.081cm}}
\pgfpathcurveto{\pgfqpoint{1.369cm}{0.107cm}}{\pgfqpoint{1.383cm}{0.141cm}}{\pgfqpoint{1.383cm}{0.178cm}}
\pgfusepath{stroke}
\end{pgfscope}
\end{pgfscope}
\end{pgfscope}
\end{pgfscope}
\end{tikzpicture}}} \|_{C_T \CC^{- \kappa,
     \varepsilon} (\rho^{\sigma})} \| Y_{\varepsilon} \|_{C_T \CC^{1 / 2 -
     \kappa, \varepsilon} (\rho^{\sigma})} \]
  \[ + | \log t | \| Y_{\varepsilon} \|_{C_T \CC^{1 / 2 - \kappa, \varepsilon}
     (\rho^{\sigma})} + \big( \| Y_{\varepsilon} \|_{C \CC^{1 / 2 - \kappa,
     \varepsilon} (\rho^{\sigma})} + \| Y_{\varepsilon} \|_{C_T^{\beta / 2}
     L^{\infty, \varepsilon} (\rho^{\sigma})} \big) \| \llbracket
     X_{\varepsilon}^2 \rrbracket \|^2_{C \CC^{- 1 - \kappa, \varepsilon}
     (\rho^{\sigma})} \]
  \[ + (1+\lambda \| \llbracket X_{\varepsilon}^2 \rrbracket \|_{C_T \CC^{- 1 -
     \kappa, \varepsilon} (\rho^{\sigma})})^{6\kappa} \| \llbracket X_{\varepsilon}^2 \rrbracket \|^{2}_{C_T \CC^{- 1 -
     \kappa, \varepsilon} (\rho^{\sigma})} \| Y_{\varepsilon} \|_{C_T \CC^{1
     / 2 - \kappa, \varepsilon} (\rho^{\sigma})}\]
     \[ \lesssim (1+\lambda+\lambda|\log t| +\lambda^{2}) \|
\mathbb{X}_{\varepsilon}\|^{7+\vartheta} \]
  and the first claim follows since $\sigma > 0$ was chosen arbitrarily.
  
  Next, we recall \eqref{eq:YY} and the fact that
  $X_{\varepsilon}^{\!\resizebox{!}{.8em}{
\begin{tikzpicture}
\pgfpathmoveto{\pgfqpoint{0cm}{-0.035cm}}
\pgfpathlineto{\pgfqpoint{1.976cm}{-0.035cm}}
\pgfpathlineto{\pgfqpoint{1.976cm}{1.94cm}}
\pgfpathlineto{\pgfqpoint{0cm}{1.94cm}}
\pgfpathclose
\pgfusepath{clip}
\begin{pgfscope}
\begin{pgfscope}
\pgfpathmoveto{\pgfqpoint{0cm}{-0.035cm}}
\pgfpathlineto{\pgfqpoint{1.976cm}{-0.035cm}}
\pgfpathlineto{\pgfqpoint{1.976cm}{1.94cm}}
\pgfpathlineto{\pgfqpoint{0cm}{1.94cm}}
\pgfpathclose
\pgfusepath{clip}
\begin{pgfscope}
\begin{pgfscope}
\pgfsetdash{}{0cm}
\pgfsetlinewidth{0.818mm}
\pgfsetroundcap
\pgfsetroundjoin
\pgfsetmiterlimit{7.0}
\definecolor{eps2pgf_color}{gray}{0}\pgfsetstrokecolor{eps2pgf_color}\pgfsetfillcolor{eps2pgf_color}
\pgfpathmoveto{\pgfqpoint{0.117cm}{1.815cm}}
\pgfpathlineto{\pgfqpoint{0.682cm}{1.065cm}}
\pgfpathlineto{\pgfqpoint{1.246cm}{1.815cm}}
\pgfusepath{stroke}
\end{pgfscope}
\definecolor{eps2pgf_color}{gray}{0}\pgfsetstrokecolor{eps2pgf_color}\pgfsetfillcolor{eps2pgf_color}
\pgfpathmoveto{\pgfqpoint{0.273cm}{1.789cm}}
\pgfpathcurveto{\pgfqpoint{0.273cm}{1.825cm}}{\pgfqpoint{0.259cm}{1.86cm}}{\pgfqpoint{0.233cm}{1.886cm}}
\pgfpathcurveto{\pgfqpoint{0.207cm}{1.912cm}}{\pgfqpoint{0.173cm}{1.926cm}}{\pgfqpoint{0.137cm}{1.926cm}}
\pgfpathcurveto{\pgfqpoint{0.1cm}{1.926cm}}{\pgfqpoint{0.066cm}{1.912cm}}{\pgfqpoint{0.04cm}{1.886cm}}
\pgfpathcurveto{\pgfqpoint{0.014cm}{1.86cm}}{\pgfqpoint{0cm}{1.825cm}}{\pgfqpoint{0cm}{1.789cm}}
\pgfpathcurveto{\pgfqpoint{0cm}{1.753cm}}{\pgfqpoint{0.014cm}{1.718cm}}{\pgfqpoint{0.04cm}{1.692cm}}
\pgfpathcurveto{\pgfqpoint{0.066cm}{1.667cm}}{\pgfqpoint{0.1cm}{1.652cm}}{\pgfqpoint{0.137cm}{1.652cm}}
\pgfpathcurveto{\pgfqpoint{0.173cm}{1.652cm}}{\pgfqpoint{0.207cm}{1.667cm}}{\pgfqpoint{0.233cm}{1.692cm}}
\pgfpathcurveto{\pgfqpoint{0.259cm}{1.718cm}}{\pgfqpoint{0.273cm}{1.753cm}}{\pgfqpoint{0.273cm}{1.789cm}}
\pgfusepath{fill}
\begin{pgfscope}
\pgfsetdash{}{0cm}
\pgfsetlinewidth{0.818mm}
\pgfsetmiterlimit{7.0}
\pgfpathmoveto{\pgfqpoint{0.682cm}{1.065cm}}
\pgfpathlineto{\pgfqpoint{0.679cm}{1.812cm}}
\pgfusepath{stroke}
\end{pgfscope}
\pgfpathmoveto{\pgfqpoint{0.815cm}{1.793cm}}
\pgfpathcurveto{\pgfqpoint{0.815cm}{1.829cm}}{\pgfqpoint{0.801cm}{1.864cm}}{\pgfqpoint{0.775cm}{1.89cm}}
\pgfpathcurveto{\pgfqpoint{0.75cm}{1.915cm}}{\pgfqpoint{0.715cm}{1.93cm}}{\pgfqpoint{0.679cm}{1.93cm}}
\pgfpathcurveto{\pgfqpoint{0.643cm}{1.93cm}}{\pgfqpoint{0.608cm}{1.915cm}}{\pgfqpoint{0.582cm}{1.89cm}}
\pgfpathcurveto{\pgfqpoint{0.557cm}{1.864cm}}{\pgfqpoint{0.542cm}{1.829cm}}{\pgfqpoint{0.542cm}{1.793cm}}
\pgfpathcurveto{\pgfqpoint{0.542cm}{1.756cm}}{\pgfqpoint{0.557cm}{1.722cm}}{\pgfqpoint{0.582cm}{1.696cm}}
\pgfpathcurveto{\pgfqpoint{0.608cm}{1.67cm}}{\pgfqpoint{0.643cm}{1.656cm}}{\pgfqpoint{0.679cm}{1.656cm}}
\pgfpathcurveto{\pgfqpoint{0.715cm}{1.656cm}}{\pgfqpoint{0.75cm}{1.67cm}}{\pgfqpoint{0.775cm}{1.696cm}}
\pgfpathcurveto{\pgfqpoint{0.801cm}{1.722cm}}{\pgfqpoint{0.815cm}{1.756cm}}{\pgfqpoint{0.815cm}{1.793cm}}
\pgfusepath{fill}
\pgfpathmoveto{\pgfqpoint{1.345cm}{1.765cm}}
\pgfpathcurveto{\pgfqpoint{1.345cm}{1.801cm}}{\pgfqpoint{1.331cm}{1.836cm}}{\pgfqpoint{1.305cm}{1.862cm}}
\pgfpathcurveto{\pgfqpoint{1.28cm}{1.887cm}}{\pgfqpoint{1.245cm}{1.902cm}}{\pgfqpoint{1.209cm}{1.902cm}}
\pgfpathcurveto{\pgfqpoint{1.172cm}{1.902cm}}{\pgfqpoint{1.138cm}{1.887cm}}{\pgfqpoint{1.112cm}{1.862cm}}
\pgfpathcurveto{\pgfqpoint{1.087cm}{1.836cm}}{\pgfqpoint{1.072cm}{1.801cm}}{\pgfqpoint{1.072cm}{1.765cm}}
\pgfpathcurveto{\pgfqpoint{1.072cm}{1.728cm}}{\pgfqpoint{1.087cm}{1.694cm}}{\pgfqpoint{1.112cm}{1.668cm}}
\pgfpathcurveto{\pgfqpoint{1.138cm}{1.642cm}}{\pgfqpoint{1.172cm}{1.628cm}}{\pgfqpoint{1.209cm}{1.628cm}}
\pgfpathcurveto{\pgfqpoint{1.245cm}{1.628cm}}{\pgfqpoint{1.28cm}{1.642cm}}{\pgfqpoint{1.305cm}{1.668cm}}
\pgfpathcurveto{\pgfqpoint{1.331cm}{1.694cm}}{\pgfqpoint{1.345cm}{1.728cm}}{\pgfqpoint{1.345cm}{1.765cm}}
\pgfusepath{fill}
\begin{pgfscope}
\pgfsetdash{}{0cm}
\pgfsetlinewidth{0.818mm}
\pgfsetroundcap
\pgfsetroundjoin
\pgfsetmiterlimit{7.0}
\pgfpathmoveto{\pgfqpoint{0.682cm}{1.065cm}}
\pgfpathlineto{\pgfqpoint{1.246cm}{0.315cm}}
\pgfpathlineto{\pgfqpoint{1.811cm}{1.065cm}}
\pgfusepath{stroke}
\end{pgfscope}
\pgfpathmoveto{\pgfqpoint{1.948cm}{1.065cm}}
\pgfpathcurveto{\pgfqpoint{1.948cm}{1.101cm}}{\pgfqpoint{1.933cm}{1.136cm}}{\pgfqpoint{1.907cm}{1.162cm}}
\pgfpathcurveto{\pgfqpoint{1.882cm}{1.187cm}}{\pgfqpoint{1.847cm}{1.202cm}}{\pgfqpoint{1.811cm}{1.202cm}}
\pgfpathcurveto{\pgfqpoint{1.775cm}{1.202cm}}{\pgfqpoint{1.74cm}{1.187cm}}{\pgfqpoint{1.714cm}{1.162cm}}
\pgfpathcurveto{\pgfqpoint{1.689cm}{1.136cm}}{\pgfqpoint{1.674cm}{1.101cm}}{\pgfqpoint{1.674cm}{1.065cm}}
\pgfpathcurveto{\pgfqpoint{1.674cm}{1.029cm}}{\pgfqpoint{1.689cm}{0.994cm}}{\pgfqpoint{1.714cm}{0.968cm}}
\pgfpathcurveto{\pgfqpoint{1.74cm}{0.942cm}}{\pgfqpoint{1.775cm}{0.928cm}}{\pgfqpoint{1.811cm}{0.928cm}}
\pgfpathcurveto{\pgfqpoint{1.847cm}{0.928cm}}{\pgfqpoint{1.882cm}{0.942cm}}{\pgfqpoint{1.907cm}{0.968cm}}
\pgfpathcurveto{\pgfqpoint{1.933cm}{0.994cm}}{\pgfqpoint{1.948cm}{1.029cm}}{\pgfqpoint{1.948cm}{1.065cm}}
\pgfusepath{fill}
\begin{pgfscope}
\pgfsetdash{}{0cm}
\pgfsetlinewidth{0.818mm}
\pgfsetmiterlimit{4.0}
\pgfpathmoveto{\pgfqpoint{1.383cm}{0.178cm}}
\pgfpathcurveto{\pgfqpoint{1.383cm}{0.214cm}}{\pgfqpoint{1.369cm}{0.249cm}}{\pgfqpoint{1.343cm}{0.275cm}}
\pgfpathcurveto{\pgfqpoint{1.317cm}{0.3cm}}{\pgfqpoint{1.283cm}{0.315cm}}{\pgfqpoint{1.246cm}{0.315cm}}
\pgfpathcurveto{\pgfqpoint{1.21cm}{0.315cm}}{\pgfqpoint{1.175cm}{0.3cm}}{\pgfqpoint{1.15cm}{0.275cm}}
\pgfpathcurveto{\pgfqpoint{1.124cm}{0.249cm}}{\pgfqpoint{1.11cm}{0.214cm}}{\pgfqpoint{1.11cm}{0.178cm}}
\pgfpathcurveto{\pgfqpoint{1.11cm}{0.141cm}}{\pgfqpoint{1.124cm}{0.107cm}}{\pgfqpoint{1.15cm}{0.081cm}}
\pgfpathcurveto{\pgfqpoint{1.175cm}{0.055cm}}{\pgfqpoint{1.21cm}{0.041cm}}{\pgfqpoint{1.246cm}{0.041cm}}
\pgfpathcurveto{\pgfqpoint{1.283cm}{0.041cm}}{\pgfqpoint{1.317cm}{0.055cm}}{\pgfqpoint{1.343cm}{0.081cm}}
\pgfpathcurveto{\pgfqpoint{1.369cm}{0.107cm}}{\pgfqpoint{1.383cm}{0.141cm}}{\pgfqpoint{1.383cm}{0.178cm}}
\pgfusepath{stroke}
\end{pgfscope}
\end{pgfscope}
\end{pgfscope}
\end{pgfscope}
\end{tikzpicture}}} = X_{\varepsilon} \circ
  X_{\varepsilon}^{\!\resizebox{0.6em}{!}{
\begin{tikzpicture}
\pgfpathmoveto{\pgfqpoint{0cm}{-0.035cm}}
\pgfpathlineto{\pgfqpoint{1.376cm}{-0.035cm}}
\pgfpathlineto{\pgfqpoint{1.376cm}{1.552cm}}
\pgfpathlineto{\pgfqpoint{0cm}{1.552cm}}
\pgfpathclose
\pgfusepath{clip}
\begin{pgfscope}
\begin{pgfscope}
\pgfpathmoveto{\pgfqpoint{0cm}{-0.035cm}}
\pgfpathlineto{\pgfqpoint{1.376cm}{-0.035cm}}
\pgfpathlineto{\pgfqpoint{1.376cm}{1.552cm}}
\pgfpathlineto{\pgfqpoint{0cm}{1.552cm}}
\pgfpathclose
\pgfusepath{clip}
\begin{pgfscope}
\begin{pgfscope}
\pgfsetdash{}{0cm}
\pgfsetlinewidth{0.818mm}
\pgfsetroundcap
\pgfsetroundjoin
\pgfsetmiterlimit{7.0}
\definecolor{eps2pgf_color}{gray}{0}\pgfsetstrokecolor{eps2pgf_color}\pgfsetfillcolor{eps2pgf_color}
\pgfpathmoveto{\pgfqpoint{0.117cm}{1.421cm}}
\pgfpathlineto{\pgfqpoint{0.682cm}{0.671cm}}
\pgfpathlineto{\pgfqpoint{1.246cm}{1.421cm}}
\pgfusepath{stroke}
\end{pgfscope}
\definecolor{eps2pgf_color}{gray}{0}\pgfsetstrokecolor{eps2pgf_color}\pgfsetfillcolor{eps2pgf_color}
\pgfpathmoveto{\pgfqpoint{0.273cm}{1.395cm}}
\pgfpathcurveto{\pgfqpoint{0.273cm}{1.432cm}}{\pgfqpoint{0.259cm}{1.467cm}}{\pgfqpoint{0.233cm}{1.492cm}}
\pgfpathcurveto{\pgfqpoint{0.207cm}{1.518cm}}{\pgfqpoint{0.173cm}{1.532cm}}{\pgfqpoint{0.137cm}{1.532cm}}
\pgfpathcurveto{\pgfqpoint{0.1cm}{1.532cm}}{\pgfqpoint{0.066cm}{1.518cm}}{\pgfqpoint{0.04cm}{1.492cm}}
\pgfpathcurveto{\pgfqpoint{0.014cm}{1.467cm}}{\pgfqpoint{0cm}{1.432cm}}{\pgfqpoint{0cm}{1.395cm}}
\pgfpathcurveto{\pgfqpoint{0cm}{1.359cm}}{\pgfqpoint{0.014cm}{1.324cm}}{\pgfqpoint{0.04cm}{1.299cm}}
\pgfpathcurveto{\pgfqpoint{0.066cm}{1.273cm}}{\pgfqpoint{0.1cm}{1.258cm}}{\pgfqpoint{0.137cm}{1.258cm}}
\pgfpathcurveto{\pgfqpoint{0.173cm}{1.258cm}}{\pgfqpoint{0.207cm}{1.273cm}}{\pgfqpoint{0.233cm}{1.299cm}}
\pgfpathcurveto{\pgfqpoint{0.259cm}{1.324cm}}{\pgfqpoint{0.273cm}{1.359cm}}{\pgfqpoint{0.273cm}{1.395cm}}
\pgfusepath{fill}
\begin{pgfscope}
\pgfsetdash{}{0cm}
\pgfsetlinewidth{0.818mm}
\pgfsetmiterlimit{7.0}
\pgfpathmoveto{\pgfqpoint{0.682cm}{0.671cm}}
\pgfpathlineto{\pgfqpoint{0.679cm}{1.418cm}}
\pgfusepath{stroke}
\end{pgfscope}
\pgfpathmoveto{\pgfqpoint{0.815cm}{1.399cm}}
\pgfpathcurveto{\pgfqpoint{0.815cm}{1.435cm}}{\pgfqpoint{0.801cm}{1.47cm}}{\pgfqpoint{0.775cm}{1.496cm}}
\pgfpathcurveto{\pgfqpoint{0.75cm}{1.521cm}}{\pgfqpoint{0.715cm}{1.536cm}}{\pgfqpoint{0.679cm}{1.536cm}}
\pgfpathcurveto{\pgfqpoint{0.643cm}{1.536cm}}{\pgfqpoint{0.608cm}{1.521cm}}{\pgfqpoint{0.582cm}{1.496cm}}
\pgfpathcurveto{\pgfqpoint{0.557cm}{1.47cm}}{\pgfqpoint{0.542cm}{1.435cm}}{\pgfqpoint{0.542cm}{1.399cm}}
\pgfpathcurveto{\pgfqpoint{0.542cm}{1.363cm}}{\pgfqpoint{0.557cm}{1.328cm}}{\pgfqpoint{0.582cm}{1.302cm}}
\pgfpathcurveto{\pgfqpoint{0.608cm}{1.276cm}}{\pgfqpoint{0.643cm}{1.262cm}}{\pgfqpoint{0.679cm}{1.262cm}}
\pgfpathcurveto{\pgfqpoint{0.715cm}{1.262cm}}{\pgfqpoint{0.75cm}{1.276cm}}{\pgfqpoint{0.775cm}{1.302cm}}
\pgfpathcurveto{\pgfqpoint{0.801cm}{1.328cm}}{\pgfqpoint{0.815cm}{1.363cm}}{\pgfqpoint{0.815cm}{1.399cm}}
\pgfusepath{fill}
\pgfpathmoveto{\pgfqpoint{1.345cm}{1.371cm}}
\pgfpathcurveto{\pgfqpoint{1.345cm}{1.408cm}}{\pgfqpoint{1.331cm}{1.442cm}}{\pgfqpoint{1.305cm}{1.468cm}}
\pgfpathcurveto{\pgfqpoint{1.28cm}{1.494cm}}{\pgfqpoint{1.245cm}{1.508cm}}{\pgfqpoint{1.209cm}{1.508cm}}
\pgfpathcurveto{\pgfqpoint{1.172cm}{1.508cm}}{\pgfqpoint{1.138cm}{1.494cm}}{\pgfqpoint{1.112cm}{1.468cm}}
\pgfpathcurveto{\pgfqpoint{1.087cm}{1.442cm}}{\pgfqpoint{1.072cm}{1.408cm}}{\pgfqpoint{1.072cm}{1.371cm}}
\pgfpathcurveto{\pgfqpoint{1.072cm}{1.335cm}}{\pgfqpoint{1.087cm}{1.3cm}}{\pgfqpoint{1.112cm}{1.274cm}}
\pgfpathcurveto{\pgfqpoint{1.138cm}{1.249cm}}{\pgfqpoint{1.172cm}{1.234cm}}{\pgfqpoint{1.209cm}{1.234cm}}
\pgfpathcurveto{\pgfqpoint{1.245cm}{1.234cm}}{\pgfqpoint{1.28cm}{1.249cm}}{\pgfqpoint{1.305cm}{1.274cm}}
\pgfpathcurveto{\pgfqpoint{1.331cm}{1.3cm}}{\pgfqpoint{1.345cm}{1.335cm}}{\pgfqpoint{1.345cm}{1.371cm}}
\pgfusepath{fill}
\begin{pgfscope}
\pgfsetdash{}{0cm}
\pgfsetlinewidth{0.818mm}
\pgfsetroundcap
\pgfsetmiterlimit{4.0}
\pgfpathmoveto{\pgfqpoint{0.682cm}{0.671cm}}
\pgfpathlineto{\pgfqpoint{0.682cm}{0.042cm}}
\pgfusepath{stroke}
\end{pgfscope}
\end{pgfscope}
\end{pgfscope}
\end{pgfscope}
\end{tikzpicture}}}$ can be constructed without any renormalization
  in $C_T \CC^{- \kappa, \varepsilon} (\rho^{\sigma})$. As a consequence, the
  resonant term reads
  \begin{equation}\label{eq:XY-res3}
  X_{\varepsilon} \circ Y_{\varepsilon} = - \lambda X_{\varepsilon}^{\!\resizebox{!}{.8em}{
\begin{tikzpicture}
\pgfpathmoveto{\pgfqpoint{0cm}{-0.035cm}}
\pgfpathlineto{\pgfqpoint{1.976cm}{-0.035cm}}
\pgfpathlineto{\pgfqpoint{1.976cm}{1.94cm}}
\pgfpathlineto{\pgfqpoint{0cm}{1.94cm}}
\pgfpathclose
\pgfusepath{clip}
\begin{pgfscope}
\begin{pgfscope}
\pgfpathmoveto{\pgfqpoint{0cm}{-0.035cm}}
\pgfpathlineto{\pgfqpoint{1.976cm}{-0.035cm}}
\pgfpathlineto{\pgfqpoint{1.976cm}{1.94cm}}
\pgfpathlineto{\pgfqpoint{0cm}{1.94cm}}
\pgfpathclose
\pgfusepath{clip}
\begin{pgfscope}
\begin{pgfscope}
\pgfsetdash{}{0cm}
\pgfsetlinewidth{0.818mm}
\pgfsetroundcap
\pgfsetroundjoin
\pgfsetmiterlimit{7.0}
\definecolor{eps2pgf_color}{gray}{0}\pgfsetstrokecolor{eps2pgf_color}\pgfsetfillcolor{eps2pgf_color}
\pgfpathmoveto{\pgfqpoint{0.117cm}{1.815cm}}
\pgfpathlineto{\pgfqpoint{0.682cm}{1.065cm}}
\pgfpathlineto{\pgfqpoint{1.246cm}{1.815cm}}
\pgfusepath{stroke}
\end{pgfscope}
\definecolor{eps2pgf_color}{gray}{0}\pgfsetstrokecolor{eps2pgf_color}\pgfsetfillcolor{eps2pgf_color}
\pgfpathmoveto{\pgfqpoint{0.273cm}{1.789cm}}
\pgfpathcurveto{\pgfqpoint{0.273cm}{1.825cm}}{\pgfqpoint{0.259cm}{1.86cm}}{\pgfqpoint{0.233cm}{1.886cm}}
\pgfpathcurveto{\pgfqpoint{0.207cm}{1.912cm}}{\pgfqpoint{0.173cm}{1.926cm}}{\pgfqpoint{0.137cm}{1.926cm}}
\pgfpathcurveto{\pgfqpoint{0.1cm}{1.926cm}}{\pgfqpoint{0.066cm}{1.912cm}}{\pgfqpoint{0.04cm}{1.886cm}}
\pgfpathcurveto{\pgfqpoint{0.014cm}{1.86cm}}{\pgfqpoint{0cm}{1.825cm}}{\pgfqpoint{0cm}{1.789cm}}
\pgfpathcurveto{\pgfqpoint{0cm}{1.753cm}}{\pgfqpoint{0.014cm}{1.718cm}}{\pgfqpoint{0.04cm}{1.692cm}}
\pgfpathcurveto{\pgfqpoint{0.066cm}{1.667cm}}{\pgfqpoint{0.1cm}{1.652cm}}{\pgfqpoint{0.137cm}{1.652cm}}
\pgfpathcurveto{\pgfqpoint{0.173cm}{1.652cm}}{\pgfqpoint{0.207cm}{1.667cm}}{\pgfqpoint{0.233cm}{1.692cm}}
\pgfpathcurveto{\pgfqpoint{0.259cm}{1.718cm}}{\pgfqpoint{0.273cm}{1.753cm}}{\pgfqpoint{0.273cm}{1.789cm}}
\pgfusepath{fill}
\begin{pgfscope}
\pgfsetdash{}{0cm}
\pgfsetlinewidth{0.818mm}
\pgfsetmiterlimit{7.0}
\pgfpathmoveto{\pgfqpoint{0.682cm}{1.065cm}}
\pgfpathlineto{\pgfqpoint{0.679cm}{1.812cm}}
\pgfusepath{stroke}
\end{pgfscope}
\pgfpathmoveto{\pgfqpoint{0.815cm}{1.793cm}}
\pgfpathcurveto{\pgfqpoint{0.815cm}{1.829cm}}{\pgfqpoint{0.801cm}{1.864cm}}{\pgfqpoint{0.775cm}{1.89cm}}
\pgfpathcurveto{\pgfqpoint{0.75cm}{1.915cm}}{\pgfqpoint{0.715cm}{1.93cm}}{\pgfqpoint{0.679cm}{1.93cm}}
\pgfpathcurveto{\pgfqpoint{0.643cm}{1.93cm}}{\pgfqpoint{0.608cm}{1.915cm}}{\pgfqpoint{0.582cm}{1.89cm}}
\pgfpathcurveto{\pgfqpoint{0.557cm}{1.864cm}}{\pgfqpoint{0.542cm}{1.829cm}}{\pgfqpoint{0.542cm}{1.793cm}}
\pgfpathcurveto{\pgfqpoint{0.542cm}{1.756cm}}{\pgfqpoint{0.557cm}{1.722cm}}{\pgfqpoint{0.582cm}{1.696cm}}
\pgfpathcurveto{\pgfqpoint{0.608cm}{1.67cm}}{\pgfqpoint{0.643cm}{1.656cm}}{\pgfqpoint{0.679cm}{1.656cm}}
\pgfpathcurveto{\pgfqpoint{0.715cm}{1.656cm}}{\pgfqpoint{0.75cm}{1.67cm}}{\pgfqpoint{0.775cm}{1.696cm}}
\pgfpathcurveto{\pgfqpoint{0.801cm}{1.722cm}}{\pgfqpoint{0.815cm}{1.756cm}}{\pgfqpoint{0.815cm}{1.793cm}}
\pgfusepath{fill}
\pgfpathmoveto{\pgfqpoint{1.345cm}{1.765cm}}
\pgfpathcurveto{\pgfqpoint{1.345cm}{1.801cm}}{\pgfqpoint{1.331cm}{1.836cm}}{\pgfqpoint{1.305cm}{1.862cm}}
\pgfpathcurveto{\pgfqpoint{1.28cm}{1.887cm}}{\pgfqpoint{1.245cm}{1.902cm}}{\pgfqpoint{1.209cm}{1.902cm}}
\pgfpathcurveto{\pgfqpoint{1.172cm}{1.902cm}}{\pgfqpoint{1.138cm}{1.887cm}}{\pgfqpoint{1.112cm}{1.862cm}}
\pgfpathcurveto{\pgfqpoint{1.087cm}{1.836cm}}{\pgfqpoint{1.072cm}{1.801cm}}{\pgfqpoint{1.072cm}{1.765cm}}
\pgfpathcurveto{\pgfqpoint{1.072cm}{1.728cm}}{\pgfqpoint{1.087cm}{1.694cm}}{\pgfqpoint{1.112cm}{1.668cm}}
\pgfpathcurveto{\pgfqpoint{1.138cm}{1.642cm}}{\pgfqpoint{1.172cm}{1.628cm}}{\pgfqpoint{1.209cm}{1.628cm}}
\pgfpathcurveto{\pgfqpoint{1.245cm}{1.628cm}}{\pgfqpoint{1.28cm}{1.642cm}}{\pgfqpoint{1.305cm}{1.668cm}}
\pgfpathcurveto{\pgfqpoint{1.331cm}{1.694cm}}{\pgfqpoint{1.345cm}{1.728cm}}{\pgfqpoint{1.345cm}{1.765cm}}
\pgfusepath{fill}
\begin{pgfscope}
\pgfsetdash{}{0cm}
\pgfsetlinewidth{0.818mm}
\pgfsetroundcap
\pgfsetroundjoin
\pgfsetmiterlimit{7.0}
\pgfpathmoveto{\pgfqpoint{0.682cm}{1.065cm}}
\pgfpathlineto{\pgfqpoint{1.246cm}{0.315cm}}
\pgfpathlineto{\pgfqpoint{1.811cm}{1.065cm}}
\pgfusepath{stroke}
\end{pgfscope}
\pgfpathmoveto{\pgfqpoint{1.948cm}{1.065cm}}
\pgfpathcurveto{\pgfqpoint{1.948cm}{1.101cm}}{\pgfqpoint{1.933cm}{1.136cm}}{\pgfqpoint{1.907cm}{1.162cm}}
\pgfpathcurveto{\pgfqpoint{1.882cm}{1.187cm}}{\pgfqpoint{1.847cm}{1.202cm}}{\pgfqpoint{1.811cm}{1.202cm}}
\pgfpathcurveto{\pgfqpoint{1.775cm}{1.202cm}}{\pgfqpoint{1.74cm}{1.187cm}}{\pgfqpoint{1.714cm}{1.162cm}}
\pgfpathcurveto{\pgfqpoint{1.689cm}{1.136cm}}{\pgfqpoint{1.674cm}{1.101cm}}{\pgfqpoint{1.674cm}{1.065cm}}
\pgfpathcurveto{\pgfqpoint{1.674cm}{1.029cm}}{\pgfqpoint{1.689cm}{0.994cm}}{\pgfqpoint{1.714cm}{0.968cm}}
\pgfpathcurveto{\pgfqpoint{1.74cm}{0.942cm}}{\pgfqpoint{1.775cm}{0.928cm}}{\pgfqpoint{1.811cm}{0.928cm}}
\pgfpathcurveto{\pgfqpoint{1.847cm}{0.928cm}}{\pgfqpoint{1.882cm}{0.942cm}}{\pgfqpoint{1.907cm}{0.968cm}}
\pgfpathcurveto{\pgfqpoint{1.933cm}{0.994cm}}{\pgfqpoint{1.948cm}{1.029cm}}{\pgfqpoint{1.948cm}{1.065cm}}
\pgfusepath{fill}
\begin{pgfscope}
\pgfsetdash{}{0cm}
\pgfsetlinewidth{0.818mm}
\pgfsetmiterlimit{4.0}
\pgfpathmoveto{\pgfqpoint{1.383cm}{0.178cm}}
\pgfpathcurveto{\pgfqpoint{1.383cm}{0.214cm}}{\pgfqpoint{1.369cm}{0.249cm}}{\pgfqpoint{1.343cm}{0.275cm}}
\pgfpathcurveto{\pgfqpoint{1.317cm}{0.3cm}}{\pgfqpoint{1.283cm}{0.315cm}}{\pgfqpoint{1.246cm}{0.315cm}}
\pgfpathcurveto{\pgfqpoint{1.21cm}{0.315cm}}{\pgfqpoint{1.175cm}{0.3cm}}{\pgfqpoint{1.15cm}{0.275cm}}
\pgfpathcurveto{\pgfqpoint{1.124cm}{0.249cm}}{\pgfqpoint{1.11cm}{0.214cm}}{\pgfqpoint{1.11cm}{0.178cm}}
\pgfpathcurveto{\pgfqpoint{1.11cm}{0.141cm}}{\pgfqpoint{1.124cm}{0.107cm}}{\pgfqpoint{1.15cm}{0.081cm}}
\pgfpathcurveto{\pgfqpoint{1.175cm}{0.055cm}}{\pgfqpoint{1.21cm}{0.041cm}}{\pgfqpoint{1.246cm}{0.041cm}}
\pgfpathcurveto{\pgfqpoint{1.283cm}{0.041cm}}{\pgfqpoint{1.317cm}{0.055cm}}{\pgfqpoint{1.343cm}{0.081cm}}
\pgfpathcurveto{\pgfqpoint{1.369cm}{0.107cm}}{\pgfqpoint{1.383cm}{0.141cm}}{\pgfqpoint{1.383cm}{0.178cm}}
\pgfusepath{stroke}
\end{pgfscope}
\end{pgfscope}
\end{pgfscope}
\end{pgfscope}
\end{tikzpicture}}} -
     X_{\varepsilon} \circ \LL_{\varepsilon}^{- 1} \left[ 3\lambda\left(
     \UU^{\varepsilon}_{>} \llbracket X_{\varepsilon}^2 \rrbracket \right)
     \succ Y_{\varepsilon} \right],
  \end{equation}
  where the for the second term we have (since $\UU^{\varepsilon}_{>}$ is a contraction) that
  \[ \lambda \left\| X_{\varepsilon} \circ \LL_{\varepsilon}^{- 1} \left[ 3 \left(
     \UU^{\varepsilon}_{>} \llbracket X_{\varepsilon}^2 \rrbracket \right)
     \succ Y_{\varepsilon} \right] \right\|_{C_T \CC^{1 / 2 - 2 \kappa,
     \varepsilon} (\rho^{3 \sigma})} \]
  \[ \lesssim \lambda \| X_{\varepsilon} \|_{C_T \CC^{- 1 / 2 - \kappa, \varepsilon}
     (\rho^{\sigma})} \left\| \left( \UU^{\varepsilon}_{>} \llbracket
     X_{\varepsilon}^2 \rrbracket \right) \succ Y_{\varepsilon} \right\|_{C_T
     \CC^{- 1 - \kappa, \varepsilon} (\rho^{2 \sigma})} \]
  \begin{equation}\label{eq:XY-res}
  \lesssim \lambda \| X_{\varepsilon} \|_{C_T \CC^{- 1 / 2 - \kappa, \varepsilon}
     (\rho^{\sigma})} \| \llbracket X_{\varepsilon}^2 \rrbracket \|_{C_T
     \CC^{- 1 - \kappa, \varepsilon} (\rho^{\sigma})} \| Y_{\varepsilon}
     \|_{C_T L^{\infty, \varepsilon} (\rho^{\sigma})} \lesssim \lambda^2 \|\mathbb{X}_{\varepsilon}\|^{6}.
     \end{equation}
  For the two paraproducts we obtain directly
  \begin{equation}\label{eq:XY-par1}
  \| X_{\varepsilon} \prec Y_{\varepsilon} \|_{C_T \CC^{- 2 \kappa,
     \varepsilon} (\rho^{3 \sigma})} \lesssim \| X_{\varepsilon} \|_{C_T
     \CC^{- 1 / 2 - \kappa, \varepsilon} (\rho^{\sigma})} \| Y_{\varepsilon}
     \|_{C_T \CC^{1 / 2 - \kappa, \varepsilon} (\rho^{\sigma})} \lesssim \lambda
\|\mathbb{X}_{\varepsilon}\|^{4},
\end{equation}
    \begin{equation}\label{eq:XY-par2}
     \| X_{\varepsilon} \succ Y_{\varepsilon} \|_{C_T \CC^{- 1 / 2 - \kappa,
     \varepsilon} (\rho^{3 \sigma})} \lesssim \| X_{\varepsilon} \|_{C_T
     \CC^{- 1 / 2 - \kappa, \varepsilon} (\rho^{\sigma})} \| Y_{\varepsilon}
     \|_{C_T L^{\infty, \varepsilon} (\rho^{\sigma})} \lesssim \lambda \|
     \mathbb{X}_{\varepsilon}\|^{4} .
     \end{equation}
  We proceed similarly for the remaining term, which is quadratic in
  $Y_{\varepsilon}$. We have
  \[ X_{\varepsilon} \circ Y_{\varepsilon}^2 = X_{\varepsilon} \circ (2
     Y_{\varepsilon} \prec Y_{\varepsilon}) + X_{\varepsilon} \circ
     (Y_{\varepsilon} \circ Y_{\varepsilon}) \]
  \[ = - X_{\varepsilon} \circ ( 2 Y_{\varepsilon} \prec \lambda X^{\!\resizebox{0.6em}{!}{
\begin{tikzpicture}
\pgfpathmoveto{\pgfqpoint{0cm}{-0.035cm}}
\pgfpathlineto{\pgfqpoint{1.376cm}{-0.035cm}}
\pgfpathlineto{\pgfqpoint{1.376cm}{1.552cm}}
\pgfpathlineto{\pgfqpoint{0cm}{1.552cm}}
\pgfpathclose
\pgfusepath{clip}
\begin{pgfscope}
\begin{pgfscope}
\pgfpathmoveto{\pgfqpoint{0cm}{-0.035cm}}
\pgfpathlineto{\pgfqpoint{1.376cm}{-0.035cm}}
\pgfpathlineto{\pgfqpoint{1.376cm}{1.552cm}}
\pgfpathlineto{\pgfqpoint{0cm}{1.552cm}}
\pgfpathclose
\pgfusepath{clip}
\begin{pgfscope}
\begin{pgfscope}
\pgfsetdash{}{0cm}
\pgfsetlinewidth{0.818mm}
\pgfsetroundcap
\pgfsetroundjoin
\pgfsetmiterlimit{7.0}
\definecolor{eps2pgf_color}{gray}{0}\pgfsetstrokecolor{eps2pgf_color}\pgfsetfillcolor{eps2pgf_color}
\pgfpathmoveto{\pgfqpoint{0.117cm}{1.421cm}}
\pgfpathlineto{\pgfqpoint{0.682cm}{0.671cm}}
\pgfpathlineto{\pgfqpoint{1.246cm}{1.421cm}}
\pgfusepath{stroke}
\end{pgfscope}
\definecolor{eps2pgf_color}{gray}{0}\pgfsetstrokecolor{eps2pgf_color}\pgfsetfillcolor{eps2pgf_color}
\pgfpathmoveto{\pgfqpoint{0.273cm}{1.395cm}}
\pgfpathcurveto{\pgfqpoint{0.273cm}{1.432cm}}{\pgfqpoint{0.259cm}{1.467cm}}{\pgfqpoint{0.233cm}{1.492cm}}
\pgfpathcurveto{\pgfqpoint{0.207cm}{1.518cm}}{\pgfqpoint{0.173cm}{1.532cm}}{\pgfqpoint{0.137cm}{1.532cm}}
\pgfpathcurveto{\pgfqpoint{0.1cm}{1.532cm}}{\pgfqpoint{0.066cm}{1.518cm}}{\pgfqpoint{0.04cm}{1.492cm}}
\pgfpathcurveto{\pgfqpoint{0.014cm}{1.467cm}}{\pgfqpoint{0cm}{1.432cm}}{\pgfqpoint{0cm}{1.395cm}}
\pgfpathcurveto{\pgfqpoint{0cm}{1.359cm}}{\pgfqpoint{0.014cm}{1.324cm}}{\pgfqpoint{0.04cm}{1.299cm}}
\pgfpathcurveto{\pgfqpoint{0.066cm}{1.273cm}}{\pgfqpoint{0.1cm}{1.258cm}}{\pgfqpoint{0.137cm}{1.258cm}}
\pgfpathcurveto{\pgfqpoint{0.173cm}{1.258cm}}{\pgfqpoint{0.207cm}{1.273cm}}{\pgfqpoint{0.233cm}{1.299cm}}
\pgfpathcurveto{\pgfqpoint{0.259cm}{1.324cm}}{\pgfqpoint{0.273cm}{1.359cm}}{\pgfqpoint{0.273cm}{1.395cm}}
\pgfusepath{fill}
\begin{pgfscope}
\pgfsetdash{}{0cm}
\pgfsetlinewidth{0.818mm}
\pgfsetmiterlimit{7.0}
\pgfpathmoveto{\pgfqpoint{0.682cm}{0.671cm}}
\pgfpathlineto{\pgfqpoint{0.679cm}{1.418cm}}
\pgfusepath{stroke}
\end{pgfscope}
\pgfpathmoveto{\pgfqpoint{0.815cm}{1.399cm}}
\pgfpathcurveto{\pgfqpoint{0.815cm}{1.435cm}}{\pgfqpoint{0.801cm}{1.47cm}}{\pgfqpoint{0.775cm}{1.496cm}}
\pgfpathcurveto{\pgfqpoint{0.75cm}{1.521cm}}{\pgfqpoint{0.715cm}{1.536cm}}{\pgfqpoint{0.679cm}{1.536cm}}
\pgfpathcurveto{\pgfqpoint{0.643cm}{1.536cm}}{\pgfqpoint{0.608cm}{1.521cm}}{\pgfqpoint{0.582cm}{1.496cm}}
\pgfpathcurveto{\pgfqpoint{0.557cm}{1.47cm}}{\pgfqpoint{0.542cm}{1.435cm}}{\pgfqpoint{0.542cm}{1.399cm}}
\pgfpathcurveto{\pgfqpoint{0.542cm}{1.363cm}}{\pgfqpoint{0.557cm}{1.328cm}}{\pgfqpoint{0.582cm}{1.302cm}}
\pgfpathcurveto{\pgfqpoint{0.608cm}{1.276cm}}{\pgfqpoint{0.643cm}{1.262cm}}{\pgfqpoint{0.679cm}{1.262cm}}
\pgfpathcurveto{\pgfqpoint{0.715cm}{1.262cm}}{\pgfqpoint{0.75cm}{1.276cm}}{\pgfqpoint{0.775cm}{1.302cm}}
\pgfpathcurveto{\pgfqpoint{0.801cm}{1.328cm}}{\pgfqpoint{0.815cm}{1.363cm}}{\pgfqpoint{0.815cm}{1.399cm}}
\pgfusepath{fill}
\pgfpathmoveto{\pgfqpoint{1.345cm}{1.371cm}}
\pgfpathcurveto{\pgfqpoint{1.345cm}{1.408cm}}{\pgfqpoint{1.331cm}{1.442cm}}{\pgfqpoint{1.305cm}{1.468cm}}
\pgfpathcurveto{\pgfqpoint{1.28cm}{1.494cm}}{\pgfqpoint{1.245cm}{1.508cm}}{\pgfqpoint{1.209cm}{1.508cm}}
\pgfpathcurveto{\pgfqpoint{1.172cm}{1.508cm}}{\pgfqpoint{1.138cm}{1.494cm}}{\pgfqpoint{1.112cm}{1.468cm}}
\pgfpathcurveto{\pgfqpoint{1.087cm}{1.442cm}}{\pgfqpoint{1.072cm}{1.408cm}}{\pgfqpoint{1.072cm}{1.371cm}}
\pgfpathcurveto{\pgfqpoint{1.072cm}{1.335cm}}{\pgfqpoint{1.087cm}{1.3cm}}{\pgfqpoint{1.112cm}{1.274cm}}
\pgfpathcurveto{\pgfqpoint{1.138cm}{1.249cm}}{\pgfqpoint{1.172cm}{1.234cm}}{\pgfqpoint{1.209cm}{1.234cm}}
\pgfpathcurveto{\pgfqpoint{1.245cm}{1.234cm}}{\pgfqpoint{1.28cm}{1.249cm}}{\pgfqpoint{1.305cm}{1.274cm}}
\pgfpathcurveto{\pgfqpoint{1.331cm}{1.3cm}}{\pgfqpoint{1.345cm}{1.335cm}}{\pgfqpoint{1.345cm}{1.371cm}}
\pgfusepath{fill}
\begin{pgfscope}
\pgfsetdash{}{0cm}
\pgfsetlinewidth{0.818mm}
\pgfsetroundcap
\pgfsetmiterlimit{4.0}
\pgfpathmoveto{\pgfqpoint{0.682cm}{0.671cm}}
\pgfpathlineto{\pgfqpoint{0.682cm}{0.042cm}}
\pgfusepath{stroke}
\end{pgfscope}
\end{pgfscope}
\end{pgfscope}
\end{pgfscope}
\end{tikzpicture}}}
     ) - X_{\varepsilon} \circ \left( 2 Y_{\varepsilon} \prec
     \LL_{\varepsilon}^{- 1} \left[ 3 \lambda \left( \UU_{>} \llbracket
     X_{\varepsilon}^2 \rrbracket \right) \succ Y_{\varepsilon} \right]
     \right) + X_{\varepsilon} \circ (Y_{\varepsilon} \circ Y_{\varepsilon})
  \]
  \[ = - 2 \lambda X_{\varepsilon}^{\!\resizebox{!}{.8em}{
\begin{tikzpicture}
\pgfpathmoveto{\pgfqpoint{0cm}{-0.035cm}}
\pgfpathlineto{\pgfqpoint{1.976cm}{-0.035cm}}
\pgfpathlineto{\pgfqpoint{1.976cm}{1.94cm}}
\pgfpathlineto{\pgfqpoint{0cm}{1.94cm}}
\pgfpathclose
\pgfusepath{clip}
\begin{pgfscope}
\begin{pgfscope}
\pgfpathmoveto{\pgfqpoint{0cm}{-0.035cm}}
\pgfpathlineto{\pgfqpoint{1.976cm}{-0.035cm}}
\pgfpathlineto{\pgfqpoint{1.976cm}{1.94cm}}
\pgfpathlineto{\pgfqpoint{0cm}{1.94cm}}
\pgfpathclose
\pgfusepath{clip}
\begin{pgfscope}
\begin{pgfscope}
\pgfsetdash{}{0cm}
\pgfsetlinewidth{0.818mm}
\pgfsetroundcap
\pgfsetroundjoin
\pgfsetmiterlimit{7.0}
\definecolor{eps2pgf_color}{gray}{0}\pgfsetstrokecolor{eps2pgf_color}\pgfsetfillcolor{eps2pgf_color}
\pgfpathmoveto{\pgfqpoint{0.117cm}{1.815cm}}
\pgfpathlineto{\pgfqpoint{0.682cm}{1.065cm}}
\pgfpathlineto{\pgfqpoint{1.246cm}{1.815cm}}
\pgfusepath{stroke}
\end{pgfscope}
\definecolor{eps2pgf_color}{gray}{0}\pgfsetstrokecolor{eps2pgf_color}\pgfsetfillcolor{eps2pgf_color}
\pgfpathmoveto{\pgfqpoint{0.273cm}{1.789cm}}
\pgfpathcurveto{\pgfqpoint{0.273cm}{1.825cm}}{\pgfqpoint{0.259cm}{1.86cm}}{\pgfqpoint{0.233cm}{1.886cm}}
\pgfpathcurveto{\pgfqpoint{0.207cm}{1.912cm}}{\pgfqpoint{0.173cm}{1.926cm}}{\pgfqpoint{0.137cm}{1.926cm}}
\pgfpathcurveto{\pgfqpoint{0.1cm}{1.926cm}}{\pgfqpoint{0.066cm}{1.912cm}}{\pgfqpoint{0.04cm}{1.886cm}}
\pgfpathcurveto{\pgfqpoint{0.014cm}{1.86cm}}{\pgfqpoint{0cm}{1.825cm}}{\pgfqpoint{0cm}{1.789cm}}
\pgfpathcurveto{\pgfqpoint{0cm}{1.753cm}}{\pgfqpoint{0.014cm}{1.718cm}}{\pgfqpoint{0.04cm}{1.692cm}}
\pgfpathcurveto{\pgfqpoint{0.066cm}{1.667cm}}{\pgfqpoint{0.1cm}{1.652cm}}{\pgfqpoint{0.137cm}{1.652cm}}
\pgfpathcurveto{\pgfqpoint{0.173cm}{1.652cm}}{\pgfqpoint{0.207cm}{1.667cm}}{\pgfqpoint{0.233cm}{1.692cm}}
\pgfpathcurveto{\pgfqpoint{0.259cm}{1.718cm}}{\pgfqpoint{0.273cm}{1.753cm}}{\pgfqpoint{0.273cm}{1.789cm}}
\pgfusepath{fill}
\begin{pgfscope}
\pgfsetdash{}{0cm}
\pgfsetlinewidth{0.818mm}
\pgfsetmiterlimit{7.0}
\pgfpathmoveto{\pgfqpoint{0.682cm}{1.065cm}}
\pgfpathlineto{\pgfqpoint{0.679cm}{1.812cm}}
\pgfusepath{stroke}
\end{pgfscope}
\pgfpathmoveto{\pgfqpoint{0.815cm}{1.793cm}}
\pgfpathcurveto{\pgfqpoint{0.815cm}{1.829cm}}{\pgfqpoint{0.801cm}{1.864cm}}{\pgfqpoint{0.775cm}{1.89cm}}
\pgfpathcurveto{\pgfqpoint{0.75cm}{1.915cm}}{\pgfqpoint{0.715cm}{1.93cm}}{\pgfqpoint{0.679cm}{1.93cm}}
\pgfpathcurveto{\pgfqpoint{0.643cm}{1.93cm}}{\pgfqpoint{0.608cm}{1.915cm}}{\pgfqpoint{0.582cm}{1.89cm}}
\pgfpathcurveto{\pgfqpoint{0.557cm}{1.864cm}}{\pgfqpoint{0.542cm}{1.829cm}}{\pgfqpoint{0.542cm}{1.793cm}}
\pgfpathcurveto{\pgfqpoint{0.542cm}{1.756cm}}{\pgfqpoint{0.557cm}{1.722cm}}{\pgfqpoint{0.582cm}{1.696cm}}
\pgfpathcurveto{\pgfqpoint{0.608cm}{1.67cm}}{\pgfqpoint{0.643cm}{1.656cm}}{\pgfqpoint{0.679cm}{1.656cm}}
\pgfpathcurveto{\pgfqpoint{0.715cm}{1.656cm}}{\pgfqpoint{0.75cm}{1.67cm}}{\pgfqpoint{0.775cm}{1.696cm}}
\pgfpathcurveto{\pgfqpoint{0.801cm}{1.722cm}}{\pgfqpoint{0.815cm}{1.756cm}}{\pgfqpoint{0.815cm}{1.793cm}}
\pgfusepath{fill}
\pgfpathmoveto{\pgfqpoint{1.345cm}{1.765cm}}
\pgfpathcurveto{\pgfqpoint{1.345cm}{1.801cm}}{\pgfqpoint{1.331cm}{1.836cm}}{\pgfqpoint{1.305cm}{1.862cm}}
\pgfpathcurveto{\pgfqpoint{1.28cm}{1.887cm}}{\pgfqpoint{1.245cm}{1.902cm}}{\pgfqpoint{1.209cm}{1.902cm}}
\pgfpathcurveto{\pgfqpoint{1.172cm}{1.902cm}}{\pgfqpoint{1.138cm}{1.887cm}}{\pgfqpoint{1.112cm}{1.862cm}}
\pgfpathcurveto{\pgfqpoint{1.087cm}{1.836cm}}{\pgfqpoint{1.072cm}{1.801cm}}{\pgfqpoint{1.072cm}{1.765cm}}
\pgfpathcurveto{\pgfqpoint{1.072cm}{1.728cm}}{\pgfqpoint{1.087cm}{1.694cm}}{\pgfqpoint{1.112cm}{1.668cm}}
\pgfpathcurveto{\pgfqpoint{1.138cm}{1.642cm}}{\pgfqpoint{1.172cm}{1.628cm}}{\pgfqpoint{1.209cm}{1.628cm}}
\pgfpathcurveto{\pgfqpoint{1.245cm}{1.628cm}}{\pgfqpoint{1.28cm}{1.642cm}}{\pgfqpoint{1.305cm}{1.668cm}}
\pgfpathcurveto{\pgfqpoint{1.331cm}{1.694cm}}{\pgfqpoint{1.345cm}{1.728cm}}{\pgfqpoint{1.345cm}{1.765cm}}
\pgfusepath{fill}
\begin{pgfscope}
\pgfsetdash{}{0cm}
\pgfsetlinewidth{0.818mm}
\pgfsetroundcap
\pgfsetroundjoin
\pgfsetmiterlimit{7.0}
\pgfpathmoveto{\pgfqpoint{0.682cm}{1.065cm}}
\pgfpathlineto{\pgfqpoint{1.246cm}{0.315cm}}
\pgfpathlineto{\pgfqpoint{1.811cm}{1.065cm}}
\pgfusepath{stroke}
\end{pgfscope}
\pgfpathmoveto{\pgfqpoint{1.948cm}{1.065cm}}
\pgfpathcurveto{\pgfqpoint{1.948cm}{1.101cm}}{\pgfqpoint{1.933cm}{1.136cm}}{\pgfqpoint{1.907cm}{1.162cm}}
\pgfpathcurveto{\pgfqpoint{1.882cm}{1.187cm}}{\pgfqpoint{1.847cm}{1.202cm}}{\pgfqpoint{1.811cm}{1.202cm}}
\pgfpathcurveto{\pgfqpoint{1.775cm}{1.202cm}}{\pgfqpoint{1.74cm}{1.187cm}}{\pgfqpoint{1.714cm}{1.162cm}}
\pgfpathcurveto{\pgfqpoint{1.689cm}{1.136cm}}{\pgfqpoint{1.674cm}{1.101cm}}{\pgfqpoint{1.674cm}{1.065cm}}
\pgfpathcurveto{\pgfqpoint{1.674cm}{1.029cm}}{\pgfqpoint{1.689cm}{0.994cm}}{\pgfqpoint{1.714cm}{0.968cm}}
\pgfpathcurveto{\pgfqpoint{1.74cm}{0.942cm}}{\pgfqpoint{1.775cm}{0.928cm}}{\pgfqpoint{1.811cm}{0.928cm}}
\pgfpathcurveto{\pgfqpoint{1.847cm}{0.928cm}}{\pgfqpoint{1.882cm}{0.942cm}}{\pgfqpoint{1.907cm}{0.968cm}}
\pgfpathcurveto{\pgfqpoint{1.933cm}{0.994cm}}{\pgfqpoint{1.948cm}{1.029cm}}{\pgfqpoint{1.948cm}{1.065cm}}
\pgfusepath{fill}
\begin{pgfscope}
\pgfsetdash{}{0cm}
\pgfsetlinewidth{0.818mm}
\pgfsetmiterlimit{4.0}
\pgfpathmoveto{\pgfqpoint{1.383cm}{0.178cm}}
\pgfpathcurveto{\pgfqpoint{1.383cm}{0.214cm}}{\pgfqpoint{1.369cm}{0.249cm}}{\pgfqpoint{1.343cm}{0.275cm}}
\pgfpathcurveto{\pgfqpoint{1.317cm}{0.3cm}}{\pgfqpoint{1.283cm}{0.315cm}}{\pgfqpoint{1.246cm}{0.315cm}}
\pgfpathcurveto{\pgfqpoint{1.21cm}{0.315cm}}{\pgfqpoint{1.175cm}{0.3cm}}{\pgfqpoint{1.15cm}{0.275cm}}
\pgfpathcurveto{\pgfqpoint{1.124cm}{0.249cm}}{\pgfqpoint{1.11cm}{0.214cm}}{\pgfqpoint{1.11cm}{0.178cm}}
\pgfpathcurveto{\pgfqpoint{1.11cm}{0.141cm}}{\pgfqpoint{1.124cm}{0.107cm}}{\pgfqpoint{1.15cm}{0.081cm}}
\pgfpathcurveto{\pgfqpoint{1.175cm}{0.055cm}}{\pgfqpoint{1.21cm}{0.041cm}}{\pgfqpoint{1.246cm}{0.041cm}}
\pgfpathcurveto{\pgfqpoint{1.283cm}{0.041cm}}{\pgfqpoint{1.317cm}{0.055cm}}{\pgfqpoint{1.343cm}{0.081cm}}
\pgfpathcurveto{\pgfqpoint{1.369cm}{0.107cm}}{\pgfqpoint{1.383cm}{0.141cm}}{\pgfqpoint{1.383cm}{0.178cm}}
\pgfusepath{stroke}
\end{pgfscope}
\end{pgfscope}
\end{pgfscope}
\end{pgfscope}
\end{tikzpicture}}} Y_{\varepsilon} - \lambda C_{\varepsilon}
     ( Y_{\varepsilon}, 2 X_{\varepsilon}^{\!\resizebox{0.6em}{!}{
\begin{tikzpicture}
\pgfpathmoveto{\pgfqpoint{0cm}{-0.035cm}}
\pgfpathlineto{\pgfqpoint{1.376cm}{-0.035cm}}
\pgfpathlineto{\pgfqpoint{1.376cm}{1.552cm}}
\pgfpathlineto{\pgfqpoint{0cm}{1.552cm}}
\pgfpathclose
\pgfusepath{clip}
\begin{pgfscope}
\begin{pgfscope}
\pgfpathmoveto{\pgfqpoint{0cm}{-0.035cm}}
\pgfpathlineto{\pgfqpoint{1.376cm}{-0.035cm}}
\pgfpathlineto{\pgfqpoint{1.376cm}{1.552cm}}
\pgfpathlineto{\pgfqpoint{0cm}{1.552cm}}
\pgfpathclose
\pgfusepath{clip}
\begin{pgfscope}
\begin{pgfscope}
\pgfsetdash{}{0cm}
\pgfsetlinewidth{0.818mm}
\pgfsetroundcap
\pgfsetroundjoin
\pgfsetmiterlimit{7.0}
\definecolor{eps2pgf_color}{gray}{0}\pgfsetstrokecolor{eps2pgf_color}\pgfsetfillcolor{eps2pgf_color}
\pgfpathmoveto{\pgfqpoint{0.117cm}{1.421cm}}
\pgfpathlineto{\pgfqpoint{0.682cm}{0.671cm}}
\pgfpathlineto{\pgfqpoint{1.246cm}{1.421cm}}
\pgfusepath{stroke}
\end{pgfscope}
\definecolor{eps2pgf_color}{gray}{0}\pgfsetstrokecolor{eps2pgf_color}\pgfsetfillcolor{eps2pgf_color}
\pgfpathmoveto{\pgfqpoint{0.273cm}{1.395cm}}
\pgfpathcurveto{\pgfqpoint{0.273cm}{1.432cm}}{\pgfqpoint{0.259cm}{1.467cm}}{\pgfqpoint{0.233cm}{1.492cm}}
\pgfpathcurveto{\pgfqpoint{0.207cm}{1.518cm}}{\pgfqpoint{0.173cm}{1.532cm}}{\pgfqpoint{0.137cm}{1.532cm}}
\pgfpathcurveto{\pgfqpoint{0.1cm}{1.532cm}}{\pgfqpoint{0.066cm}{1.518cm}}{\pgfqpoint{0.04cm}{1.492cm}}
\pgfpathcurveto{\pgfqpoint{0.014cm}{1.467cm}}{\pgfqpoint{0cm}{1.432cm}}{\pgfqpoint{0cm}{1.395cm}}
\pgfpathcurveto{\pgfqpoint{0cm}{1.359cm}}{\pgfqpoint{0.014cm}{1.324cm}}{\pgfqpoint{0.04cm}{1.299cm}}
\pgfpathcurveto{\pgfqpoint{0.066cm}{1.273cm}}{\pgfqpoint{0.1cm}{1.258cm}}{\pgfqpoint{0.137cm}{1.258cm}}
\pgfpathcurveto{\pgfqpoint{0.173cm}{1.258cm}}{\pgfqpoint{0.207cm}{1.273cm}}{\pgfqpoint{0.233cm}{1.299cm}}
\pgfpathcurveto{\pgfqpoint{0.259cm}{1.324cm}}{\pgfqpoint{0.273cm}{1.359cm}}{\pgfqpoint{0.273cm}{1.395cm}}
\pgfusepath{fill}
\begin{pgfscope}
\pgfsetdash{}{0cm}
\pgfsetlinewidth{0.818mm}
\pgfsetmiterlimit{7.0}
\pgfpathmoveto{\pgfqpoint{0.682cm}{0.671cm}}
\pgfpathlineto{\pgfqpoint{0.679cm}{1.418cm}}
\pgfusepath{stroke}
\end{pgfscope}
\pgfpathmoveto{\pgfqpoint{0.815cm}{1.399cm}}
\pgfpathcurveto{\pgfqpoint{0.815cm}{1.435cm}}{\pgfqpoint{0.801cm}{1.47cm}}{\pgfqpoint{0.775cm}{1.496cm}}
\pgfpathcurveto{\pgfqpoint{0.75cm}{1.521cm}}{\pgfqpoint{0.715cm}{1.536cm}}{\pgfqpoint{0.679cm}{1.536cm}}
\pgfpathcurveto{\pgfqpoint{0.643cm}{1.536cm}}{\pgfqpoint{0.608cm}{1.521cm}}{\pgfqpoint{0.582cm}{1.496cm}}
\pgfpathcurveto{\pgfqpoint{0.557cm}{1.47cm}}{\pgfqpoint{0.542cm}{1.435cm}}{\pgfqpoint{0.542cm}{1.399cm}}
\pgfpathcurveto{\pgfqpoint{0.542cm}{1.363cm}}{\pgfqpoint{0.557cm}{1.328cm}}{\pgfqpoint{0.582cm}{1.302cm}}
\pgfpathcurveto{\pgfqpoint{0.608cm}{1.276cm}}{\pgfqpoint{0.643cm}{1.262cm}}{\pgfqpoint{0.679cm}{1.262cm}}
\pgfpathcurveto{\pgfqpoint{0.715cm}{1.262cm}}{\pgfqpoint{0.75cm}{1.276cm}}{\pgfqpoint{0.775cm}{1.302cm}}
\pgfpathcurveto{\pgfqpoint{0.801cm}{1.328cm}}{\pgfqpoint{0.815cm}{1.363cm}}{\pgfqpoint{0.815cm}{1.399cm}}
\pgfusepath{fill}
\pgfpathmoveto{\pgfqpoint{1.345cm}{1.371cm}}
\pgfpathcurveto{\pgfqpoint{1.345cm}{1.408cm}}{\pgfqpoint{1.331cm}{1.442cm}}{\pgfqpoint{1.305cm}{1.468cm}}
\pgfpathcurveto{\pgfqpoint{1.28cm}{1.494cm}}{\pgfqpoint{1.245cm}{1.508cm}}{\pgfqpoint{1.209cm}{1.508cm}}
\pgfpathcurveto{\pgfqpoint{1.172cm}{1.508cm}}{\pgfqpoint{1.138cm}{1.494cm}}{\pgfqpoint{1.112cm}{1.468cm}}
\pgfpathcurveto{\pgfqpoint{1.087cm}{1.442cm}}{\pgfqpoint{1.072cm}{1.408cm}}{\pgfqpoint{1.072cm}{1.371cm}}
\pgfpathcurveto{\pgfqpoint{1.072cm}{1.335cm}}{\pgfqpoint{1.087cm}{1.3cm}}{\pgfqpoint{1.112cm}{1.274cm}}
\pgfpathcurveto{\pgfqpoint{1.138cm}{1.249cm}}{\pgfqpoint{1.172cm}{1.234cm}}{\pgfqpoint{1.209cm}{1.234cm}}
\pgfpathcurveto{\pgfqpoint{1.245cm}{1.234cm}}{\pgfqpoint{1.28cm}{1.249cm}}{\pgfqpoint{1.305cm}{1.274cm}}
\pgfpathcurveto{\pgfqpoint{1.331cm}{1.3cm}}{\pgfqpoint{1.345cm}{1.335cm}}{\pgfqpoint{1.345cm}{1.371cm}}
\pgfusepath{fill}
\begin{pgfscope}
\pgfsetdash{}{0cm}
\pgfsetlinewidth{0.818mm}
\pgfsetroundcap
\pgfsetmiterlimit{4.0}
\pgfpathmoveto{\pgfqpoint{0.682cm}{0.671cm}}
\pgfpathlineto{\pgfqpoint{0.682cm}{0.042cm}}
\pgfusepath{stroke}
\end{pgfscope}
\end{pgfscope}
\end{pgfscope}
\end{pgfscope}
\end{tikzpicture}}}, X_{\varepsilon}
     ) -\lambda  X_{\varepsilon} \circ \left( 2 Y_{\varepsilon} \prec
     \LL_{\varepsilon}^{- 1} \left[ 3 \left( \UU^{\varepsilon}_{>} \llbracket
     X_{\varepsilon}^2 \rrbracket \right) \succ Y_{\varepsilon} \right]
     \right) + X_{\varepsilon} \circ (Y_{\varepsilon} \circ Y_{\varepsilon}) .
  \]
  Accordingly,
  \[ \| X_{\varepsilon} \circ Y_{\varepsilon}^2 \|_{C_T \CC^{- \kappa,
     \varepsilon} (\rho^{4 \sigma})} \lesssim \lambda \|
     X_{\varepsilon}^{\!\resizebox{!}{.8em}{
\begin{tikzpicture}
\pgfpathmoveto{\pgfqpoint{0cm}{-0.035cm}}
\pgfpathlineto{\pgfqpoint{1.976cm}{-0.035cm}}
\pgfpathlineto{\pgfqpoint{1.976cm}{1.94cm}}
\pgfpathlineto{\pgfqpoint{0cm}{1.94cm}}
\pgfpathclose
\pgfusepath{clip}
\begin{pgfscope}
\begin{pgfscope}
\pgfpathmoveto{\pgfqpoint{0cm}{-0.035cm}}
\pgfpathlineto{\pgfqpoint{1.976cm}{-0.035cm}}
\pgfpathlineto{\pgfqpoint{1.976cm}{1.94cm}}
\pgfpathlineto{\pgfqpoint{0cm}{1.94cm}}
\pgfpathclose
\pgfusepath{clip}
\begin{pgfscope}
\begin{pgfscope}
\pgfsetdash{}{0cm}
\pgfsetlinewidth{0.818mm}
\pgfsetroundcap
\pgfsetroundjoin
\pgfsetmiterlimit{7.0}
\definecolor{eps2pgf_color}{gray}{0}\pgfsetstrokecolor{eps2pgf_color}\pgfsetfillcolor{eps2pgf_color}
\pgfpathmoveto{\pgfqpoint{0.117cm}{1.815cm}}
\pgfpathlineto{\pgfqpoint{0.682cm}{1.065cm}}
\pgfpathlineto{\pgfqpoint{1.246cm}{1.815cm}}
\pgfusepath{stroke}
\end{pgfscope}
\definecolor{eps2pgf_color}{gray}{0}\pgfsetstrokecolor{eps2pgf_color}\pgfsetfillcolor{eps2pgf_color}
\pgfpathmoveto{\pgfqpoint{0.273cm}{1.789cm}}
\pgfpathcurveto{\pgfqpoint{0.273cm}{1.825cm}}{\pgfqpoint{0.259cm}{1.86cm}}{\pgfqpoint{0.233cm}{1.886cm}}
\pgfpathcurveto{\pgfqpoint{0.207cm}{1.912cm}}{\pgfqpoint{0.173cm}{1.926cm}}{\pgfqpoint{0.137cm}{1.926cm}}
\pgfpathcurveto{\pgfqpoint{0.1cm}{1.926cm}}{\pgfqpoint{0.066cm}{1.912cm}}{\pgfqpoint{0.04cm}{1.886cm}}
\pgfpathcurveto{\pgfqpoint{0.014cm}{1.86cm}}{\pgfqpoint{0cm}{1.825cm}}{\pgfqpoint{0cm}{1.789cm}}
\pgfpathcurveto{\pgfqpoint{0cm}{1.753cm}}{\pgfqpoint{0.014cm}{1.718cm}}{\pgfqpoint{0.04cm}{1.692cm}}
\pgfpathcurveto{\pgfqpoint{0.066cm}{1.667cm}}{\pgfqpoint{0.1cm}{1.652cm}}{\pgfqpoint{0.137cm}{1.652cm}}
\pgfpathcurveto{\pgfqpoint{0.173cm}{1.652cm}}{\pgfqpoint{0.207cm}{1.667cm}}{\pgfqpoint{0.233cm}{1.692cm}}
\pgfpathcurveto{\pgfqpoint{0.259cm}{1.718cm}}{\pgfqpoint{0.273cm}{1.753cm}}{\pgfqpoint{0.273cm}{1.789cm}}
\pgfusepath{fill}
\begin{pgfscope}
\pgfsetdash{}{0cm}
\pgfsetlinewidth{0.818mm}
\pgfsetmiterlimit{7.0}
\pgfpathmoveto{\pgfqpoint{0.682cm}{1.065cm}}
\pgfpathlineto{\pgfqpoint{0.679cm}{1.812cm}}
\pgfusepath{stroke}
\end{pgfscope}
\pgfpathmoveto{\pgfqpoint{0.815cm}{1.793cm}}
\pgfpathcurveto{\pgfqpoint{0.815cm}{1.829cm}}{\pgfqpoint{0.801cm}{1.864cm}}{\pgfqpoint{0.775cm}{1.89cm}}
\pgfpathcurveto{\pgfqpoint{0.75cm}{1.915cm}}{\pgfqpoint{0.715cm}{1.93cm}}{\pgfqpoint{0.679cm}{1.93cm}}
\pgfpathcurveto{\pgfqpoint{0.643cm}{1.93cm}}{\pgfqpoint{0.608cm}{1.915cm}}{\pgfqpoint{0.582cm}{1.89cm}}
\pgfpathcurveto{\pgfqpoint{0.557cm}{1.864cm}}{\pgfqpoint{0.542cm}{1.829cm}}{\pgfqpoint{0.542cm}{1.793cm}}
\pgfpathcurveto{\pgfqpoint{0.542cm}{1.756cm}}{\pgfqpoint{0.557cm}{1.722cm}}{\pgfqpoint{0.582cm}{1.696cm}}
\pgfpathcurveto{\pgfqpoint{0.608cm}{1.67cm}}{\pgfqpoint{0.643cm}{1.656cm}}{\pgfqpoint{0.679cm}{1.656cm}}
\pgfpathcurveto{\pgfqpoint{0.715cm}{1.656cm}}{\pgfqpoint{0.75cm}{1.67cm}}{\pgfqpoint{0.775cm}{1.696cm}}
\pgfpathcurveto{\pgfqpoint{0.801cm}{1.722cm}}{\pgfqpoint{0.815cm}{1.756cm}}{\pgfqpoint{0.815cm}{1.793cm}}
\pgfusepath{fill}
\pgfpathmoveto{\pgfqpoint{1.345cm}{1.765cm}}
\pgfpathcurveto{\pgfqpoint{1.345cm}{1.801cm}}{\pgfqpoint{1.331cm}{1.836cm}}{\pgfqpoint{1.305cm}{1.862cm}}
\pgfpathcurveto{\pgfqpoint{1.28cm}{1.887cm}}{\pgfqpoint{1.245cm}{1.902cm}}{\pgfqpoint{1.209cm}{1.902cm}}
\pgfpathcurveto{\pgfqpoint{1.172cm}{1.902cm}}{\pgfqpoint{1.138cm}{1.887cm}}{\pgfqpoint{1.112cm}{1.862cm}}
\pgfpathcurveto{\pgfqpoint{1.087cm}{1.836cm}}{\pgfqpoint{1.072cm}{1.801cm}}{\pgfqpoint{1.072cm}{1.765cm}}
\pgfpathcurveto{\pgfqpoint{1.072cm}{1.728cm}}{\pgfqpoint{1.087cm}{1.694cm}}{\pgfqpoint{1.112cm}{1.668cm}}
\pgfpathcurveto{\pgfqpoint{1.138cm}{1.642cm}}{\pgfqpoint{1.172cm}{1.628cm}}{\pgfqpoint{1.209cm}{1.628cm}}
\pgfpathcurveto{\pgfqpoint{1.245cm}{1.628cm}}{\pgfqpoint{1.28cm}{1.642cm}}{\pgfqpoint{1.305cm}{1.668cm}}
\pgfpathcurveto{\pgfqpoint{1.331cm}{1.694cm}}{\pgfqpoint{1.345cm}{1.728cm}}{\pgfqpoint{1.345cm}{1.765cm}}
\pgfusepath{fill}
\begin{pgfscope}
\pgfsetdash{}{0cm}
\pgfsetlinewidth{0.818mm}
\pgfsetroundcap
\pgfsetroundjoin
\pgfsetmiterlimit{7.0}
\pgfpathmoveto{\pgfqpoint{0.682cm}{1.065cm}}
\pgfpathlineto{\pgfqpoint{1.246cm}{0.315cm}}
\pgfpathlineto{\pgfqpoint{1.811cm}{1.065cm}}
\pgfusepath{stroke}
\end{pgfscope}
\pgfpathmoveto{\pgfqpoint{1.948cm}{1.065cm}}
\pgfpathcurveto{\pgfqpoint{1.948cm}{1.101cm}}{\pgfqpoint{1.933cm}{1.136cm}}{\pgfqpoint{1.907cm}{1.162cm}}
\pgfpathcurveto{\pgfqpoint{1.882cm}{1.187cm}}{\pgfqpoint{1.847cm}{1.202cm}}{\pgfqpoint{1.811cm}{1.202cm}}
\pgfpathcurveto{\pgfqpoint{1.775cm}{1.202cm}}{\pgfqpoint{1.74cm}{1.187cm}}{\pgfqpoint{1.714cm}{1.162cm}}
\pgfpathcurveto{\pgfqpoint{1.689cm}{1.136cm}}{\pgfqpoint{1.674cm}{1.101cm}}{\pgfqpoint{1.674cm}{1.065cm}}
\pgfpathcurveto{\pgfqpoint{1.674cm}{1.029cm}}{\pgfqpoint{1.689cm}{0.994cm}}{\pgfqpoint{1.714cm}{0.968cm}}
\pgfpathcurveto{\pgfqpoint{1.74cm}{0.942cm}}{\pgfqpoint{1.775cm}{0.928cm}}{\pgfqpoint{1.811cm}{0.928cm}}
\pgfpathcurveto{\pgfqpoint{1.847cm}{0.928cm}}{\pgfqpoint{1.882cm}{0.942cm}}{\pgfqpoint{1.907cm}{0.968cm}}
\pgfpathcurveto{\pgfqpoint{1.933cm}{0.994cm}}{\pgfqpoint{1.948cm}{1.029cm}}{\pgfqpoint{1.948cm}{1.065cm}}
\pgfusepath{fill}
\begin{pgfscope}
\pgfsetdash{}{0cm}
\pgfsetlinewidth{0.818mm}
\pgfsetmiterlimit{4.0}
\pgfpathmoveto{\pgfqpoint{1.383cm}{0.178cm}}
\pgfpathcurveto{\pgfqpoint{1.383cm}{0.214cm}}{\pgfqpoint{1.369cm}{0.249cm}}{\pgfqpoint{1.343cm}{0.275cm}}
\pgfpathcurveto{\pgfqpoint{1.317cm}{0.3cm}}{\pgfqpoint{1.283cm}{0.315cm}}{\pgfqpoint{1.246cm}{0.315cm}}
\pgfpathcurveto{\pgfqpoint{1.21cm}{0.315cm}}{\pgfqpoint{1.175cm}{0.3cm}}{\pgfqpoint{1.15cm}{0.275cm}}
\pgfpathcurveto{\pgfqpoint{1.124cm}{0.249cm}}{\pgfqpoint{1.11cm}{0.214cm}}{\pgfqpoint{1.11cm}{0.178cm}}
\pgfpathcurveto{\pgfqpoint{1.11cm}{0.141cm}}{\pgfqpoint{1.124cm}{0.107cm}}{\pgfqpoint{1.15cm}{0.081cm}}
\pgfpathcurveto{\pgfqpoint{1.175cm}{0.055cm}}{\pgfqpoint{1.21cm}{0.041cm}}{\pgfqpoint{1.246cm}{0.041cm}}
\pgfpathcurveto{\pgfqpoint{1.283cm}{0.041cm}}{\pgfqpoint{1.317cm}{0.055cm}}{\pgfqpoint{1.343cm}{0.081cm}}
\pgfpathcurveto{\pgfqpoint{1.369cm}{0.107cm}}{\pgfqpoint{1.383cm}{0.141cm}}{\pgfqpoint{1.383cm}{0.178cm}}
\pgfusepath{stroke}
\end{pgfscope}
\end{pgfscope}
\end{pgfscope}
\end{pgfscope}
\end{tikzpicture}}} \|_{C_T \CC^{- \kappa, \varepsilon}
     (\rho^{\sigma})} \| Y_{\varepsilon} \|_{C_T \CC^{2 \kappa,
     \varepsilon} (\rho^{\sigma})} \]
  \[ + \lambda \| Y_{\varepsilon} \|_{C \CC^{3\kappa, \varepsilon}
     (\rho^{\sigma})} \| X_{\varepsilon}^{\!\resizebox{0.6em}{!}{
\begin{tikzpicture}
\pgfpathmoveto{\pgfqpoint{0cm}{-0.035cm}}
\pgfpathlineto{\pgfqpoint{1.376cm}{-0.035cm}}
\pgfpathlineto{\pgfqpoint{1.376cm}{1.552cm}}
\pgfpathlineto{\pgfqpoint{0cm}{1.552cm}}
\pgfpathclose
\pgfusepath{clip}
\begin{pgfscope}
\begin{pgfscope}
\pgfpathmoveto{\pgfqpoint{0cm}{-0.035cm}}
\pgfpathlineto{\pgfqpoint{1.376cm}{-0.035cm}}
\pgfpathlineto{\pgfqpoint{1.376cm}{1.552cm}}
\pgfpathlineto{\pgfqpoint{0cm}{1.552cm}}
\pgfpathclose
\pgfusepath{clip}
\begin{pgfscope}
\begin{pgfscope}
\pgfsetdash{}{0cm}
\pgfsetlinewidth{0.818mm}
\pgfsetroundcap
\pgfsetroundjoin
\pgfsetmiterlimit{7.0}
\definecolor{eps2pgf_color}{gray}{0}\pgfsetstrokecolor{eps2pgf_color}\pgfsetfillcolor{eps2pgf_color}
\pgfpathmoveto{\pgfqpoint{0.117cm}{1.421cm}}
\pgfpathlineto{\pgfqpoint{0.682cm}{0.671cm}}
\pgfpathlineto{\pgfqpoint{1.246cm}{1.421cm}}
\pgfusepath{stroke}
\end{pgfscope}
\definecolor{eps2pgf_color}{gray}{0}\pgfsetstrokecolor{eps2pgf_color}\pgfsetfillcolor{eps2pgf_color}
\pgfpathmoveto{\pgfqpoint{0.273cm}{1.395cm}}
\pgfpathcurveto{\pgfqpoint{0.273cm}{1.432cm}}{\pgfqpoint{0.259cm}{1.467cm}}{\pgfqpoint{0.233cm}{1.492cm}}
\pgfpathcurveto{\pgfqpoint{0.207cm}{1.518cm}}{\pgfqpoint{0.173cm}{1.532cm}}{\pgfqpoint{0.137cm}{1.532cm}}
\pgfpathcurveto{\pgfqpoint{0.1cm}{1.532cm}}{\pgfqpoint{0.066cm}{1.518cm}}{\pgfqpoint{0.04cm}{1.492cm}}
\pgfpathcurveto{\pgfqpoint{0.014cm}{1.467cm}}{\pgfqpoint{0cm}{1.432cm}}{\pgfqpoint{0cm}{1.395cm}}
\pgfpathcurveto{\pgfqpoint{0cm}{1.359cm}}{\pgfqpoint{0.014cm}{1.324cm}}{\pgfqpoint{0.04cm}{1.299cm}}
\pgfpathcurveto{\pgfqpoint{0.066cm}{1.273cm}}{\pgfqpoint{0.1cm}{1.258cm}}{\pgfqpoint{0.137cm}{1.258cm}}
\pgfpathcurveto{\pgfqpoint{0.173cm}{1.258cm}}{\pgfqpoint{0.207cm}{1.273cm}}{\pgfqpoint{0.233cm}{1.299cm}}
\pgfpathcurveto{\pgfqpoint{0.259cm}{1.324cm}}{\pgfqpoint{0.273cm}{1.359cm}}{\pgfqpoint{0.273cm}{1.395cm}}
\pgfusepath{fill}
\begin{pgfscope}
\pgfsetdash{}{0cm}
\pgfsetlinewidth{0.818mm}
\pgfsetmiterlimit{7.0}
\pgfpathmoveto{\pgfqpoint{0.682cm}{0.671cm}}
\pgfpathlineto{\pgfqpoint{0.679cm}{1.418cm}}
\pgfusepath{stroke}
\end{pgfscope}
\pgfpathmoveto{\pgfqpoint{0.815cm}{1.399cm}}
\pgfpathcurveto{\pgfqpoint{0.815cm}{1.435cm}}{\pgfqpoint{0.801cm}{1.47cm}}{\pgfqpoint{0.775cm}{1.496cm}}
\pgfpathcurveto{\pgfqpoint{0.75cm}{1.521cm}}{\pgfqpoint{0.715cm}{1.536cm}}{\pgfqpoint{0.679cm}{1.536cm}}
\pgfpathcurveto{\pgfqpoint{0.643cm}{1.536cm}}{\pgfqpoint{0.608cm}{1.521cm}}{\pgfqpoint{0.582cm}{1.496cm}}
\pgfpathcurveto{\pgfqpoint{0.557cm}{1.47cm}}{\pgfqpoint{0.542cm}{1.435cm}}{\pgfqpoint{0.542cm}{1.399cm}}
\pgfpathcurveto{\pgfqpoint{0.542cm}{1.363cm}}{\pgfqpoint{0.557cm}{1.328cm}}{\pgfqpoint{0.582cm}{1.302cm}}
\pgfpathcurveto{\pgfqpoint{0.608cm}{1.276cm}}{\pgfqpoint{0.643cm}{1.262cm}}{\pgfqpoint{0.679cm}{1.262cm}}
\pgfpathcurveto{\pgfqpoint{0.715cm}{1.262cm}}{\pgfqpoint{0.75cm}{1.276cm}}{\pgfqpoint{0.775cm}{1.302cm}}
\pgfpathcurveto{\pgfqpoint{0.801cm}{1.328cm}}{\pgfqpoint{0.815cm}{1.363cm}}{\pgfqpoint{0.815cm}{1.399cm}}
\pgfusepath{fill}
\pgfpathmoveto{\pgfqpoint{1.345cm}{1.371cm}}
\pgfpathcurveto{\pgfqpoint{1.345cm}{1.408cm}}{\pgfqpoint{1.331cm}{1.442cm}}{\pgfqpoint{1.305cm}{1.468cm}}
\pgfpathcurveto{\pgfqpoint{1.28cm}{1.494cm}}{\pgfqpoint{1.245cm}{1.508cm}}{\pgfqpoint{1.209cm}{1.508cm}}
\pgfpathcurveto{\pgfqpoint{1.172cm}{1.508cm}}{\pgfqpoint{1.138cm}{1.494cm}}{\pgfqpoint{1.112cm}{1.468cm}}
\pgfpathcurveto{\pgfqpoint{1.087cm}{1.442cm}}{\pgfqpoint{1.072cm}{1.408cm}}{\pgfqpoint{1.072cm}{1.371cm}}
\pgfpathcurveto{\pgfqpoint{1.072cm}{1.335cm}}{\pgfqpoint{1.087cm}{1.3cm}}{\pgfqpoint{1.112cm}{1.274cm}}
\pgfpathcurveto{\pgfqpoint{1.138cm}{1.249cm}}{\pgfqpoint{1.172cm}{1.234cm}}{\pgfqpoint{1.209cm}{1.234cm}}
\pgfpathcurveto{\pgfqpoint{1.245cm}{1.234cm}}{\pgfqpoint{1.28cm}{1.249cm}}{\pgfqpoint{1.305cm}{1.274cm}}
\pgfpathcurveto{\pgfqpoint{1.331cm}{1.3cm}}{\pgfqpoint{1.345cm}{1.335cm}}{\pgfqpoint{1.345cm}{1.371cm}}
\pgfusepath{fill}
\begin{pgfscope}
\pgfsetdash{}{0cm}
\pgfsetlinewidth{0.818mm}
\pgfsetroundcap
\pgfsetmiterlimit{4.0}
\pgfpathmoveto{\pgfqpoint{0.682cm}{0.671cm}}
\pgfpathlineto{\pgfqpoint{0.682cm}{0.042cm}}
\pgfusepath{stroke}
\end{pgfscope}
\end{pgfscope}
\end{pgfscope}
\end{pgfscope}
\end{tikzpicture}}} \|_{C_T \CC^{1
     / 2 - \kappa, \varepsilon} (\rho^{\sigma})} \| X_{\varepsilon} \|_{C_T
     \CC^{- 1 / 2 - \kappa, \varepsilon} (\rho^{\sigma})} \]
  \[ + \lambda \| X_{\varepsilon} \|_{C_T \CC^{- 1 / 2 - \kappa, \varepsilon}
     (\rho^{\sigma})} \| Y_{\varepsilon} \|_{C_T L^{\infty,
     \varepsilon} (\rho^{\sigma})}^2 \| \llbracket X_{\varepsilon}^2
     \rrbracket \|_{C_T \CC^{- 1 - \kappa, \varepsilon} (\rho^{\sigma})} \]
  \begin{equation}\label{eq:XY2-res}
  + \| X_{\varepsilon} \|_{C_T \CC^{- 1 / 2 - \kappa, \varepsilon}
     (\rho^{\sigma})} \| Y_{\varepsilon} \|_{C_T \CC^{3\kappa,
     \varepsilon} (\rho^{\sigma})}  \| Y_{\varepsilon} \|_{C_T \CC^{1 / 2 - \kappa,
     \varepsilon} (\rho^{\sigma})} \lesssim (\lambda^2 + \lambda^3) 
\|\mathbb{X}_{\varepsilon}\|^{9} 
  \end{equation}
  and for the paraproducts
  \[ \| X_{\varepsilon} \prec Y_{\varepsilon}^2 \|_{C_T \CC^{- 2 \kappa,
     \varepsilon} (\rho^{4 \sigma})} \lesssim \| X_{\varepsilon} \|_{C_T
     \CC^{- 1 / 2 - \kappa, \varepsilon} (\rho^{\sigma})} \| Y_{\varepsilon}
     \|^2_{C_T \CC^{1 / 2 - \kappa, \varepsilon} (\rho^{\sigma})} \lesssim \lambda^2
     \|\mathbb{X}_{\varepsilon}\|^{7}, \]
  \[ \| X_{\varepsilon} \succ Y_{\varepsilon}^2 \|_{C_T \CC^{- 1 / 2 - \kappa,
     \varepsilon} (\rho^{4 \sigma})} \lesssim \| X_{\varepsilon} \|_{C_T
     \CC^{- 1 / 2 - \kappa, \varepsilon} (\rho^{\sigma})} \| Y_{\varepsilon}
     \|^2_{C_T L^{\infty, \varepsilon} (\rho^{\sigma})} \lesssim \lambda^2 
     \|\mathbb{X}_{\varepsilon}\|^{7} . \]
       This gives the second bound from the statement of the lemma.
\end{proof}

Let us now proceed with our main energy estimate. In view of Lemma
\ref{lem:energy12}, our goal is to control the terms in $\Theta_{\rho^4,
\varepsilon} + \Psi_{\rho^4, \varepsilon}$ by quantities of the from
\[ c(\lambda) Q_{\rho} (\mathbb{X}_{\varepsilon}) + \delta (\lambda \| \rho \phi_{\varepsilon}
   \|_{L^{4, \varepsilon}}^4 + m^{2} \| \rho^2 \psi_{\varepsilon} \|_{L^{2,
   \varepsilon}}^2 + \| \rho^2 \nabla_{\varepsilon} \psi_{\varepsilon}
   \|_{L^{2, \varepsilon}}^2), \]
where $\delta > 0$ is a small constant which can change from line to line.
Indeed, with such a bound in hand it will be possible to absorb the norms of
$\phi_{\varepsilon}, \psi_{\varepsilon}$ from the right hand side of
{\eqref{eq:en12}} into the left hand side and a bound for $\phi_{\varepsilon},
\psi_{\varepsilon}$ in terms of the noise terms will follow.

\begin{lemma}
  \label{lemma:bounds-rhs1}Let $\rho$ be a weight such that $\rho^{\iota} \in
  L^{4, 0}$ for some $\iota \in (0, 1)$. Then
  \[ | \Theta_{\rho^4, \varepsilon} | + | \Psi_{\rho^4, \varepsilon} |
     \leqslant ( \lambda^3+\lambda^{(12 - \theta)
     / (2 + \theta)} | \log t|^{4 / (2 + \theta)}+ \lambda^{7}) Q_{\rho} (\mathbb{X}_{\varepsilon}) \]
  \[ + \delta (\lambda \| \rho \phi_{\varepsilon} \|_{L^{4, \varepsilon}}^4 + \|
     \rho^2 \phi_{\varepsilon} \|_{H^{1 - 2 \kappa, \varepsilon}}^2 + m^{2}\|
     \rho^2 \psi_{\varepsilon} \|^2_{L^{2, \varepsilon}} + \| \rho^2 \nabla_{\varepsilon}
     \psi_{\varepsilon} \|^2_{L^{2, \varepsilon}}), \]
where $\theta=\frac{1/2-4\kappa}{1-2\kappa}$.
\end{lemma}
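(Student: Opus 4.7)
The plan is to bound each of the many terms in $\Theta_{\rho^4,\varepsilon}$ and $\Psi_{\rho^4,\varepsilon}$ separately, showing that each contributes either $c(\lambda)Q_\rho(\mathbb X_\varepsilon)$ for some $\lambda$-power compatible with the claim, or a small multiple of one of the good quantities on the left-hand side of \eqref{eq:en12} (namely $\lambda\|\rho\phi\|_{L^{4,\varepsilon}}^4$, $\|\rho^2\phi\|_{H^{1-2\kappa,\varepsilon}}^2$, $m^2\|\rho^2\psi\|_{L^{2,\varepsilon}}^2$ and $\|\rho^2\nabla_\varepsilon\psi\|_{L^{2,\varepsilon}}^2$) which is then absorbed. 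Three mechanisms drive the estimates. First, the commutator terms $\langle[\nabla_\varepsilon,\rho^4]\psi_\varepsilon,\nabla_\varepsilon\psi_\varepsilon\rangle_\varepsilon$ and $\langle[\Q_\varepsilon,\rho^4]\Q_\varepsilon^{-1}[3\lambda\llbracket X_\varepsilon^2\rrbracket\succ\phi_\varepsilon],\psi_\varepsilon\rangle_\varepsilon$ each produce, by a discrete Leibniz rule, a factor of the parameter $h$ in \eqref{eq:weight}; fixing $h$ sufficiently small once and for all turns them into $\delta$-multiples of the gradient and mass norms of $\psi_\varepsilon$, plus a harmless $Q_\rho$-contribution after a weighted Schauder estimate on $\Q_\varepsilon^{-1}[\llbracket X_\varepsilon^2\rrbracket\succ\phi_\varepsilon]$. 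Second, the purely stochastic terms $\langle\rho^4\phi_\varepsilon^2,\lambda^2\ttwothreer{X_\varepsilon}\rangle_\varepsilon$, the two approximate-duality terms $D_{\rho^4,\varepsilon}$, and the $\tilde C_\varepsilon$ commutator are controlled via the Besov duality of Lemma~\ref{lem:dual2} against the $\CC^{-\kappa,\varepsilon}(\rho^\sigma)$ norms collected in $\|\mathbb X_\varepsilon\|$, decomposing the $\phi$-factors through paraproducts and closing with Young's inequality against $\lambda\|\rho\phi_\varepsilon\|_{L^{4,\varepsilon}}^4$. Third, the cubic and mixed pieces $\lambda Y^3$, $\lambda Y^2\phi$, $\lambda Y\phi^2$ and $\lambda X(Y+\phi)^2$ in $\Psi$ are handled by H\"older on weighted $L^p$ spaces using Lemma~\ref{lem:Y1}.

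The genuinely delicate steps are the paraproduct $-3\lambda\llbracket X^2_\varepsilon\rrbracket\prec\phi_\varepsilon$ tested against $\rho^4\phi_\varepsilon$ and the $\lambda^2 Z_\varepsilon$ contribution. For the first, Besov duality and the paraproduct estimates of \cite{MP17} yield a bound proportional to
\[
\lambda\|\llbracket X_\varepsilon^2\rrbracket\|_{\CC^{-1-\kappa,\varepsilon}(\rho^\sigma)}\|\rho^2\phi_\varepsilon\|_{L^{4,\varepsilon}}\|\rho^2\phi_\varepsilon\|_{B^{1-2\kappa,\varepsilon}_{4/3,\infty}},
\]
and a Gagliardo--Nirenberg interpolation between $L^{4,\varepsilon}(\rho)$ and $H^{1-2\kappa,\varepsilon}(\rho^2)$ produces the exponent $\theta=(1/2-4\kappa)/(1-2\kappa)$. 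Young's inequality with exponents tuned to the two targets $\lambda\|\rho\phi_\varepsilon\|_{L^{4,\varepsilon}}^4$ and $\|\rho^2\phi_\varepsilon\|_{H^{1-2\kappa,\varepsilon}}^2$ leaves a residue of order $\lambda^{(12-\theta)/(2+\theta)}Q_\rho$, without any logarithm. The $Z_\varepsilon$ term is then dualised as $|\langle\rho^4\phi_\varepsilon,\lambda^2 Z_\varepsilon\rangle_\varepsilon|\lesssim\lambda^2\|\rho^2\phi_\varepsilon\|_{H^{1/2+\kappa,\varepsilon}}\|\rho^{2}Z_\varepsilon\|_{H^{-1/2-\kappa,\varepsilon}}$, and Lemma~\ref{lem:Z} supplies the factor $(1+\lambda|\log t|+\lambda^2)Q_\rho$; exactly the same interpolation plus Young, applied to the $\lambda\cdot\lambda|\log t|$-summand, produces the $|\log t|^{4/(2+\theta)}$ prefactor in front of $\lambda^{(12-\theta)/(2+\theta)}$, while the logarithm-free pieces contribute to $\lambda^3$ and $\lambda^7$. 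The $H^{1-2\kappa,\varepsilon}(\rho^2)$ norm of $\phi_\varepsilon$ needed throughout is reconstructed from \eqref{eq:psi1} via $\phi_\varepsilon=\psi_\varepsilon-\Q_\varepsilon^{-1}[3\lambda\llbracket X_\varepsilon^2\rrbracket\succ\phi_\varepsilon]$, weighted Schauder, and a further interpolation to absorb $\|\rho^2\phi_\varepsilon\|_{H^{2\kappa,\varepsilon}}$.

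The remaining pieces, in particular the cut-off paraproduct $-3\lambda(\UU^\varepsilon_\leqslant\llbracket X_\varepsilon^2\rrbracket)\succ Y_\varepsilon$, are controlled by Lemma~\ref{lem:loc} combined with the defining relation \eqref{eq:U11}, which guarantees that the $2^{L(1+\kappa)}$ loss of the low-frequency localizer is compensated by $\lambda\|\llbracket X_\varepsilon^2\rrbracket\|_{\CC^{-1-\kappa,\varepsilon}(\rho^\sigma)}$, yielding a bound consistent with $Q_\rho$. The main obstacle I anticipate is the bookkeeping: on the order of a dozen terms, each carrying its own Young exponents, must combine so that the only surviving stochastic prefactors are $\lambda^3$, $\lambda^{(12-\theta)/(2+\theta)}|\log t|^{4/(2+\theta)}$, and $\lambda^7$. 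In particular, one must avoid Young splittings that would produce intermediate powers like $\lambda^{4}$ or $\lambda^{6}$, which forces each interpolation to be done with exponents matched precisely to $\|\rho\phi_\varepsilon\|_{L^4}^4$ and $\|\rho^2\phi_\varepsilon\|_{H^{1-2\kappa}}^2$; the exponent $(12-\theta)/(2+\theta)$ is sharp exactly because of the simultaneous presence of these two norms. A secondary, more routine difficulty will be a fully uniform-in-$\varepsilon$ handling of the discrete commutators $[\nabla_\varepsilon,\rho^4]$ and $[\Q_\varepsilon,\rho^4]$, which should follow from a first-order Taylor expansion of $\rho$ and the lattice paraproduct estimates quoted after \eqref{eq:psi1}.
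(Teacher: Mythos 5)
Your plan matches the paper's proof almost point for point: the three mechanisms you identify — commutators absorbed via a small $h$ in the weight, Besov duality against the $\CC^{\alpha,\varepsilon}(\rho^\sigma)$ norms of $\mathbb X_\varepsilon$ followed by weighted Gagliardo--Nirenberg interpolation and Young, and H\"older on the cubic/mixed $Y$-terms using Lemma~\ref{lem:Y1} — are exactly what the paper does, and you correctly locate the $|\log t|^{4/(2+\theta)}$ factor in the $Z_\varepsilon$ term via Lemma~\ref{lem:Z} and the $(12-\theta)/(2+\theta)$ Young exponent that balances $\lambda\|\rho\phi\|_{L^{4,\varepsilon}}^4$ against $\|\rho^2\phi\|_{H^{1-2\kappa,\varepsilon}}^2$.

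Two small points worth correcting. First, the paper does not use the single exponent $\theta=(1/2-4\kappa)/(1-2\kappa)$ for the paraproduct $\llbracket X^2\rrbracket\prec\phi$ as your middle display suggests; for that term the interpolation target is $H^{1/2+3\kappa,\varepsilon}(\rho^{2-\sigma/2})$ giving $\theta=(1/2-5\kappa)/(1-2\kappa)$, and the $\theta$ that survives into the lemma's statement comes from the $Z_\varepsilon$ and $X(Y+\phi)^2$ contributions where the target is $H^{1/2+2\kappa,\varepsilon}$ (or $B^{1/2+\kappa,\varepsilon}_{1,1}$). Relatedly, the displayed estimate $\lambda\|\llbracket X^2_\varepsilon\rrbracket\|_{\CC^{-1-\kappa,\varepsilon}(\rho^\sigma)}\|\rho^2\phi_\varepsilon\|_{L^{4,\varepsilon}}\|\rho^2\phi_\varepsilon\|_{B^{1-2\kappa,\varepsilon}_{4/3,\infty}}$ does not close: the total regularity on the product side would be $-(1+\kappa)+(1-2\kappa)=-3\kappa<0$, so you need a strictly higher index on the $\phi$-factor, roughly $1+\kappa$ in your $B_{4/3}$ scale (or, as the paper does, split $\rho^4\phi_\varepsilon^2$ into $L^2$ and $H^{1/2+3\kappa}$ pieces). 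Second, the claim that one "must avoid Young splittings producing intermediate powers like $\lambda^4$ or $\lambda^6$" is a misconception: since $Q_\rho$ is a generic polynomial with coefficients independent of $\lambda$, any $\lambda^a$ with $3\leqslant a\leqslant 7$ satisfies $\lambda^a\lesssim\lambda^3+\lambda^7$ for all $\lambda\geqslant 0$, so such terms are harmless. Indeed the paper freely produces $\lambda^5$ from the $Y$-cubic terms, $\lambda^4+\lambda^5$ from the $\UU^\varepsilon_{\leqslant}$ term and $\lambda^{(16-\theta)/(2+\theta)}$ from $Z_\varepsilon$, all of which are swallowed by the endpoints $\lambda^3$ and $\lambda^7$. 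Your bookkeeping worry is therefore unnecessary — the only delicate constraint is making sure the \emph{maximum} exponent stays $\leqslant 7$ for small $\kappa$, which it does.
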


\begin{proof}
  Since the weight $\rho$ is polynomial and vanishes at infinity, we may
  assume without loss of generality that $0 < \rho \leqslant 1$ and
  consequently $\rho^{\alpha} \leqslant \rho^{\beta}$ whenever $\alpha
  \geqslant \beta \geqslant 0$. We also observe that due to the integrability
  of the weight (see Lemma~\ref{lem:15})
  \[ \| \rho^{1 + \iota} \phi_{\varepsilon} \|_{L^{2, \varepsilon}} \lesssim
     \| \rho \phi_{\varepsilon} \|_{L^{4, \varepsilon}} \]
  with a constant that depends only on $\rho$. In the sequel, we repeatedly
  use various results for discrete Besov spaces established in
  Section~\ref{s:app}. Namely, the equivalent formulation of the Besov norms
  (Lemma~\ref{lem:equiv2}), the duality estimate (Lemma~\ref{lem:dual2}),
  interpolation (Lemma~\ref{lem:int}), embedding (Lemma~\ref{lem:emb}), a
  bound for powers of functions (Lemma~\ref{lem:mult}) as well as bounds for
  the commutators (Lemma~\ref{lem:comm1}).
  
  Even though it is not necessary for the present proof, we keep track of the precise power of the quantity $\|\mathbb {X}_{\varepsilon}\|$ in each of the estimates. This will be used in Section \ref{s:exp} below to establish the stretched exponential integrability of the fields. We recall that $\vartheta=O(\kappa)>0$ denotes a generic small constant which changes from line to line.
  
  In view of Lemma~\ref{lem:energy12} we shall bound each term on the right
  hand side of {\eqref{eq:en12}}. We have
  \[ | \langle [\nabla_{\varepsilon}, \rho^4] \psi_{\varepsilon},
     \nabla_{\varepsilon} \psi_{\varepsilon} \rangle_{\varepsilon} | \leqslant
     C_{\rho}  \| \rho^2 \psi_{\varepsilon} \|_{L^{2, \varepsilon}}  \| \rho^2
     \nabla_{\varepsilon} \psi_{\varepsilon} \|_{L^{2, \varepsilon}} \leqslant
     C_{\delta} C_{\rho}^2  \| \rho^2 \psi_{\varepsilon} \|^2_{L^{2,
     \varepsilon}} + \delta \| \rho^2 \nabla_{\varepsilon} \psi_{\varepsilon}
     \|^2_{L^{2, \varepsilon}} . \]
  This term can be absorbed provided $C_{\rho} = \| \rho^{- 4}
  [\nabla_{\varepsilon}, \rho^4]\|_{L^{\infty, \varepsilon}}$ is sufficiently
  small, such that $C_{\delta} C^2_{\rho} \leqslant m^2$, which can be
  obtained by choosing $h > 0$ small enough (depending only on $m^2$ and
  $\delta$) in the definition~{\eqref{eq:weight}} of the weight $\rho$. Next,
  \[ \left| \left\langle \left[ \Q_{\varepsilon}, \rho^4 \right]
     \Q_{\varepsilon}^{- 1} [3 \lambda \llbracket X_{\varepsilon}^2 \rrbracket
     \succ \phi_{\varepsilon}], \psi_{\varepsilon} \right\rangle_{\varepsilon}
     \right| \leqslant \left| \left\langle \Q_{\varepsilon}^{- 1} [3 \lambda
     \llbracket X_{\varepsilon}^2 \rrbracket \succ \phi_{\varepsilon}], \left[
     \Q_{\varepsilon}, \rho^4 \right] \psi_{\varepsilon}
     \right\rangle_{\varepsilon} \right| \]
  and we estimate explicitly
  \[ \left| \rho^{- 2} \left[ \Q_{\varepsilon}, \rho^4 \right]
     \psi_{\varepsilon} \right|_{L^{2, \varepsilon}} \leqslant C_{\rho}  (\|
     \rho^2 \psi_{\varepsilon} \|_{L^{2, \varepsilon}} +\| \rho^2
     \nabla_{\varepsilon} \psi_{\varepsilon} \|_{L^{2, \varepsilon} (\rho^2)})
  \]
  for another constant $C_{\rho}$ depending only on the weight $\rho$, which
  can be taken smaller than $m^2$ by choosing $h > 0$ small, and consequently
  \[ \left| \left\langle \left[ \Q_{\varepsilon}, \rho^4 \right]
     \Q_{\varepsilon}^{- 1} [3 \lambda \llbracket X_{\varepsilon}^2 \rrbracket
     \succ \phi_{\varepsilon}], \psi_{\varepsilon} \right\rangle_{\varepsilon}
     \right| \lesssim \lambda \|\mathbb{X}_{\varepsilon} \|^2  \| \rho^{2 -
     \sigma} \phi_{\varepsilon} \|_{L^{2, \varepsilon}}  (m^2 \| \rho^2
     \psi_{\varepsilon} \|_{L^{2, \varepsilon}} +\| \rho^2
     \nabla_{\varepsilon} \psi_{\varepsilon} \|_{L^{2, \varepsilon}}) \]
  \[ \leqslant \lambda^3 C_{\delta} \|\mathbb{X}_{\varepsilon} \|^8 + \delta
     (\lambda \| \rho \phi_{\varepsilon} \|^4_{L^{4, \varepsilon}} + m^2 \|
     \rho^2 \psi_{\varepsilon} \|^2_{L^{2, \varepsilon}} +\| \rho^2
     \nabla_{\varepsilon} \psi_{\varepsilon} \|^2_{L^{2, \varepsilon}}), \]
  since $\sigma$ is sufficiently small.
  
  Using Lemma~\ref{lem:dual2}, Lemma~\ref{lem:mult}, interpolation from Lemma~\ref{lem:int} with for $\theta = \frac{1 -
  4 \kappa}{1 - 2 \kappa}$ and Young's inequality 
  we obtain  
  \[ | \lambda^2 \langle \rho^4 \phi_{\varepsilon}^2,
     X_{\varepsilon}^{\!\resizebox{!}{.8em}{
\begin{tikzpicture}
\pgfpathmoveto{\pgfqpoint{0cm}{-0.035cm}}
\pgfpathlineto{\pgfqpoint{1.976cm}{-0.035cm}}
\pgfpathlineto{\pgfqpoint{1.976cm}{1.94cm}}
\pgfpathlineto{\pgfqpoint{0cm}{1.94cm}}
\pgfpathclose
\pgfusepath{clip}
\begin{pgfscope}
\begin{pgfscope}
\pgfpathmoveto{\pgfqpoint{0cm}{-0.035cm}}
\pgfpathlineto{\pgfqpoint{1.976cm}{-0.035cm}}
\pgfpathlineto{\pgfqpoint{1.976cm}{1.94cm}}
\pgfpathlineto{\pgfqpoint{0cm}{1.94cm}}
\pgfpathclose
\pgfusepath{clip}
\begin{pgfscope}
\begin{pgfscope}
\pgfsetdash{}{0cm}
\pgfsetlinewidth{0.818mm}
\pgfsetroundcap
\pgfsetroundjoin
\pgfsetmiterlimit{7.0}
\definecolor{eps2pgf_color}{gray}{0}\pgfsetstrokecolor{eps2pgf_color}\pgfsetfillcolor{eps2pgf_color}
\pgfpathmoveto{\pgfqpoint{0.117cm}{1.815cm}}
\pgfpathlineto{\pgfqpoint{0.682cm}{1.065cm}}
\pgfpathlineto{\pgfqpoint{1.246cm}{1.815cm}}
\pgfusepath{stroke}
\end{pgfscope}
\definecolor{eps2pgf_color}{gray}{0}\pgfsetstrokecolor{eps2pgf_color}\pgfsetfillcolor{eps2pgf_color}
\pgfpathmoveto{\pgfqpoint{0.273cm}{1.789cm}}
\pgfpathcurveto{\pgfqpoint{0.273cm}{1.825cm}}{\pgfqpoint{0.259cm}{1.86cm}}{\pgfqpoint{0.233cm}{1.886cm}}
\pgfpathcurveto{\pgfqpoint{0.207cm}{1.912cm}}{\pgfqpoint{0.173cm}{1.926cm}}{\pgfqpoint{0.137cm}{1.926cm}}
\pgfpathcurveto{\pgfqpoint{0.1cm}{1.926cm}}{\pgfqpoint{0.066cm}{1.912cm}}{\pgfqpoint{0.04cm}{1.886cm}}
\pgfpathcurveto{\pgfqpoint{0.014cm}{1.86cm}}{\pgfqpoint{0cm}{1.825cm}}{\pgfqpoint{0cm}{1.789cm}}
\pgfpathcurveto{\pgfqpoint{0cm}{1.753cm}}{\pgfqpoint{0.014cm}{1.718cm}}{\pgfqpoint{0.04cm}{1.692cm}}
\pgfpathcurveto{\pgfqpoint{0.066cm}{1.667cm}}{\pgfqpoint{0.1cm}{1.652cm}}{\pgfqpoint{0.137cm}{1.652cm}}
\pgfpathcurveto{\pgfqpoint{0.173cm}{1.652cm}}{\pgfqpoint{0.207cm}{1.667cm}}{\pgfqpoint{0.233cm}{1.692cm}}
\pgfpathcurveto{\pgfqpoint{0.259cm}{1.718cm}}{\pgfqpoint{0.273cm}{1.753cm}}{\pgfqpoint{0.273cm}{1.789cm}}
\pgfusepath{fill}
\pgfpathmoveto{\pgfqpoint{1.345cm}{1.765cm}}
\pgfpathcurveto{\pgfqpoint{1.345cm}{1.801cm}}{\pgfqpoint{1.331cm}{1.836cm}}{\pgfqpoint{1.305cm}{1.862cm}}
\pgfpathcurveto{\pgfqpoint{1.28cm}{1.887cm}}{\pgfqpoint{1.245cm}{1.902cm}}{\pgfqpoint{1.209cm}{1.902cm}}
\pgfpathcurveto{\pgfqpoint{1.172cm}{1.902cm}}{\pgfqpoint{1.138cm}{1.887cm}}{\pgfqpoint{1.112cm}{1.862cm}}
\pgfpathcurveto{\pgfqpoint{1.087cm}{1.836cm}}{\pgfqpoint{1.072cm}{1.801cm}}{\pgfqpoint{1.072cm}{1.765cm}}
\pgfpathcurveto{\pgfqpoint{1.072cm}{1.728cm}}{\pgfqpoint{1.087cm}{1.694cm}}{\pgfqpoint{1.112cm}{1.668cm}}
\pgfpathcurveto{\pgfqpoint{1.138cm}{1.642cm}}{\pgfqpoint{1.172cm}{1.628cm}}{\pgfqpoint{1.209cm}{1.628cm}}
\pgfpathcurveto{\pgfqpoint{1.245cm}{1.628cm}}{\pgfqpoint{1.28cm}{1.642cm}}{\pgfqpoint{1.305cm}{1.668cm}}
\pgfpathcurveto{\pgfqpoint{1.331cm}{1.694cm}}{\pgfqpoint{1.345cm}{1.728cm}}{\pgfqpoint{1.345cm}{1.765cm}}
\pgfusepath{fill}
\begin{pgfscope}
\pgfsetdash{}{0cm}
\pgfsetlinewidth{0.818mm}
\pgfsetroundcap
\pgfsetroundjoin
\pgfsetmiterlimit{7.0}
\pgfpathmoveto{\pgfqpoint{0.682cm}{1.065cm}}
\pgfpathlineto{\pgfqpoint{1.246cm}{0.315cm}}
\pgfpathlineto{\pgfqpoint{1.811cm}{1.065cm}}
\pgfusepath{stroke}
\end{pgfscope}
\pgfpathmoveto{\pgfqpoint{1.948cm}{1.065cm}}
\pgfpathcurveto{\pgfqpoint{1.948cm}{1.101cm}}{\pgfqpoint{1.933cm}{1.136cm}}{\pgfqpoint{1.907cm}{1.162cm}}
\pgfpathcurveto{\pgfqpoint{1.882cm}{1.187cm}}{\pgfqpoint{1.847cm}{1.202cm}}{\pgfqpoint{1.811cm}{1.202cm}}
\pgfpathcurveto{\pgfqpoint{1.775cm}{1.202cm}}{\pgfqpoint{1.74cm}{1.187cm}}{\pgfqpoint{1.714cm}{1.162cm}}
\pgfpathcurveto{\pgfqpoint{1.689cm}{1.136cm}}{\pgfqpoint{1.674cm}{1.101cm}}{\pgfqpoint{1.674cm}{1.065cm}}
\pgfpathcurveto{\pgfqpoint{1.674cm}{1.029cm}}{\pgfqpoint{1.689cm}{0.994cm}}{\pgfqpoint{1.714cm}{0.968cm}}
\pgfpathcurveto{\pgfqpoint{1.74cm}{0.942cm}}{\pgfqpoint{1.775cm}{0.928cm}}{\pgfqpoint{1.811cm}{0.928cm}}
\pgfpathcurveto{\pgfqpoint{1.847cm}{0.928cm}}{\pgfqpoint{1.882cm}{0.942cm}}{\pgfqpoint{1.907cm}{0.968cm}}
\pgfpathcurveto{\pgfqpoint{1.933cm}{0.994cm}}{\pgfqpoint{1.948cm}{1.029cm}}{\pgfqpoint{1.948cm}{1.065cm}}
\pgfusepath{fill}
\begin{pgfscope}
\pgfsetdash{}{0cm}
\pgfsetlinewidth{0.818mm}
\pgfsetmiterlimit{7.0}
\pgfpathmoveto{\pgfqpoint{1.246cm}{0.315cm}}
\pgfpathlineto{\pgfqpoint{1.244cm}{1.061cm}}
\pgfusepath{stroke}
\end{pgfscope}
\pgfpathmoveto{\pgfqpoint{1.38cm}{1.065cm}}
\pgfpathcurveto{\pgfqpoint{1.38cm}{1.101cm}}{\pgfqpoint{1.366cm}{1.136cm}}{\pgfqpoint{1.34cm}{1.162cm}}
\pgfpathcurveto{\pgfqpoint{1.315cm}{1.187cm}}{\pgfqpoint{1.28cm}{1.202cm}}{\pgfqpoint{1.244cm}{1.202cm}}
\pgfpathcurveto{\pgfqpoint{1.207cm}{1.202cm}}{\pgfqpoint{1.173cm}{1.187cm}}{\pgfqpoint{1.147cm}{1.162cm}}
\pgfpathcurveto{\pgfqpoint{1.121cm}{1.136cm}}{\pgfqpoint{1.107cm}{1.101cm}}{\pgfqpoint{1.107cm}{1.065cm}}
\pgfpathcurveto{\pgfqpoint{1.107cm}{1.029cm}}{\pgfqpoint{1.121cm}{0.994cm}}{\pgfqpoint{1.147cm}{0.968cm}}
\pgfpathcurveto{\pgfqpoint{1.173cm}{0.942cm}}{\pgfqpoint{1.207cm}{0.928cm}}{\pgfqpoint{1.244cm}{0.928cm}}
\pgfpathcurveto{\pgfqpoint{1.28cm}{0.928cm}}{\pgfqpoint{1.315cm}{0.942cm}}{\pgfqpoint{1.34cm}{0.968cm}}
\pgfpathcurveto{\pgfqpoint{1.366cm}{0.994cm}}{\pgfqpoint{1.38cm}{1.029cm}}{\pgfqpoint{1.38cm}{1.065cm}}
\pgfusepath{fill}
\begin{pgfscope}
\pgfsetdash{}{0cm}
\pgfsetlinewidth{0.818mm}
\pgfsetmiterlimit{4.0}
\pgfpathmoveto{\pgfqpoint{1.383cm}{0.178cm}}
\pgfpathcurveto{\pgfqpoint{1.383cm}{0.214cm}}{\pgfqpoint{1.369cm}{0.249cm}}{\pgfqpoint{1.343cm}{0.275cm}}
\pgfpathcurveto{\pgfqpoint{1.317cm}{0.3cm}}{\pgfqpoint{1.283cm}{0.315cm}}{\pgfqpoint{1.246cm}{0.315cm}}
\pgfpathcurveto{\pgfqpoint{1.21cm}{0.315cm}}{\pgfqpoint{1.175cm}{0.3cm}}{\pgfqpoint{1.15cm}{0.275cm}}
\pgfpathcurveto{\pgfqpoint{1.124cm}{0.249cm}}{\pgfqpoint{1.11cm}{0.214cm}}{\pgfqpoint{1.11cm}{0.178cm}}
\pgfpathcurveto{\pgfqpoint{1.11cm}{0.141cm}}{\pgfqpoint{1.124cm}{0.107cm}}{\pgfqpoint{1.15cm}{0.081cm}}
\pgfpathcurveto{\pgfqpoint{1.175cm}{0.055cm}}{\pgfqpoint{1.21cm}{0.041cm}}{\pgfqpoint{1.246cm}{0.041cm}}
\pgfpathcurveto{\pgfqpoint{1.283cm}{0.041cm}}{\pgfqpoint{1.317cm}{0.055cm}}{\pgfqpoint{1.343cm}{0.081cm}}
\pgfpathcurveto{\pgfqpoint{1.369cm}{0.107cm}}{\pgfqpoint{1.383cm}{0.141cm}}{\pgfqpoint{1.383cm}{0.178cm}}
\pgfusepath{stroke}
\end{pgfscope}
\end{pgfscope}
\end{pgfscope}
\end{pgfscope}
\end{tikzpicture}}} \rangle_{\varepsilon} | \lesssim \lambda^2
     \| \rho^{\sigma} X_{\varepsilon}^{\!\resizebox{!}{.8em}{
\begin{tikzpicture}
\pgfpathmoveto{\pgfqpoint{0cm}{-0.035cm}}
\pgfpathlineto{\pgfqpoint{1.976cm}{-0.035cm}}
\pgfpathlineto{\pgfqpoint{1.976cm}{1.94cm}}
\pgfpathlineto{\pgfqpoint{0cm}{1.94cm}}
\pgfpathclose
\pgfusepath{clip}
\begin{pgfscope}
\begin{pgfscope}
\pgfpathmoveto{\pgfqpoint{0cm}{-0.035cm}}
\pgfpathlineto{\pgfqpoint{1.976cm}{-0.035cm}}
\pgfpathlineto{\pgfqpoint{1.976cm}{1.94cm}}
\pgfpathlineto{\pgfqpoint{0cm}{1.94cm}}
\pgfpathclose
\pgfusepath{clip}
\begin{pgfscope}
\begin{pgfscope}
\pgfsetdash{}{0cm}
\pgfsetlinewidth{0.818mm}
\pgfsetroundcap
\pgfsetroundjoin
\pgfsetmiterlimit{7.0}
\definecolor{eps2pgf_color}{gray}{0}\pgfsetstrokecolor{eps2pgf_color}\pgfsetfillcolor{eps2pgf_color}
\pgfpathmoveto{\pgfqpoint{0.117cm}{1.815cm}}
\pgfpathlineto{\pgfqpoint{0.682cm}{1.065cm}}
\pgfpathlineto{\pgfqpoint{1.246cm}{1.815cm}}
\pgfusepath{stroke}
\end{pgfscope}
\definecolor{eps2pgf_color}{gray}{0}\pgfsetstrokecolor{eps2pgf_color}\pgfsetfillcolor{eps2pgf_color}
\pgfpathmoveto{\pgfqpoint{0.273cm}{1.789cm}}
\pgfpathcurveto{\pgfqpoint{0.273cm}{1.825cm}}{\pgfqpoint{0.259cm}{1.86cm}}{\pgfqpoint{0.233cm}{1.886cm}}
\pgfpathcurveto{\pgfqpoint{0.207cm}{1.912cm}}{\pgfqpoint{0.173cm}{1.926cm}}{\pgfqpoint{0.137cm}{1.926cm}}
\pgfpathcurveto{\pgfqpoint{0.1cm}{1.926cm}}{\pgfqpoint{0.066cm}{1.912cm}}{\pgfqpoint{0.04cm}{1.886cm}}
\pgfpathcurveto{\pgfqpoint{0.014cm}{1.86cm}}{\pgfqpoint{0cm}{1.825cm}}{\pgfqpoint{0cm}{1.789cm}}
\pgfpathcurveto{\pgfqpoint{0cm}{1.753cm}}{\pgfqpoint{0.014cm}{1.718cm}}{\pgfqpoint{0.04cm}{1.692cm}}
\pgfpathcurveto{\pgfqpoint{0.066cm}{1.667cm}}{\pgfqpoint{0.1cm}{1.652cm}}{\pgfqpoint{0.137cm}{1.652cm}}
\pgfpathcurveto{\pgfqpoint{0.173cm}{1.652cm}}{\pgfqpoint{0.207cm}{1.667cm}}{\pgfqpoint{0.233cm}{1.692cm}}
\pgfpathcurveto{\pgfqpoint{0.259cm}{1.718cm}}{\pgfqpoint{0.273cm}{1.753cm}}{\pgfqpoint{0.273cm}{1.789cm}}
\pgfusepath{fill}
\pgfpathmoveto{\pgfqpoint{1.345cm}{1.765cm}}
\pgfpathcurveto{\pgfqpoint{1.345cm}{1.801cm}}{\pgfqpoint{1.331cm}{1.836cm}}{\pgfqpoint{1.305cm}{1.862cm}}
\pgfpathcurveto{\pgfqpoint{1.28cm}{1.887cm}}{\pgfqpoint{1.245cm}{1.902cm}}{\pgfqpoint{1.209cm}{1.902cm}}
\pgfpathcurveto{\pgfqpoint{1.172cm}{1.902cm}}{\pgfqpoint{1.138cm}{1.887cm}}{\pgfqpoint{1.112cm}{1.862cm}}
\pgfpathcurveto{\pgfqpoint{1.087cm}{1.836cm}}{\pgfqpoint{1.072cm}{1.801cm}}{\pgfqpoint{1.072cm}{1.765cm}}
\pgfpathcurveto{\pgfqpoint{1.072cm}{1.728cm}}{\pgfqpoint{1.087cm}{1.694cm}}{\pgfqpoint{1.112cm}{1.668cm}}
\pgfpathcurveto{\pgfqpoint{1.138cm}{1.642cm}}{\pgfqpoint{1.172cm}{1.628cm}}{\pgfqpoint{1.209cm}{1.628cm}}
\pgfpathcurveto{\pgfqpoint{1.245cm}{1.628cm}}{\pgfqpoint{1.28cm}{1.642cm}}{\pgfqpoint{1.305cm}{1.668cm}}
\pgfpathcurveto{\pgfqpoint{1.331cm}{1.694cm}}{\pgfqpoint{1.345cm}{1.728cm}}{\pgfqpoint{1.345cm}{1.765cm}}
\pgfusepath{fill}
\begin{pgfscope}
\pgfsetdash{}{0cm}
\pgfsetlinewidth{0.818mm}
\pgfsetroundcap
\pgfsetroundjoin
\pgfsetmiterlimit{7.0}
\pgfpathmoveto{\pgfqpoint{0.682cm}{1.065cm}}
\pgfpathlineto{\pgfqpoint{1.246cm}{0.315cm}}
\pgfpathlineto{\pgfqpoint{1.811cm}{1.065cm}}
\pgfusepath{stroke}
\end{pgfscope}
\pgfpathmoveto{\pgfqpoint{1.948cm}{1.065cm}}
\pgfpathcurveto{\pgfqpoint{1.948cm}{1.101cm}}{\pgfqpoint{1.933cm}{1.136cm}}{\pgfqpoint{1.907cm}{1.162cm}}
\pgfpathcurveto{\pgfqpoint{1.882cm}{1.187cm}}{\pgfqpoint{1.847cm}{1.202cm}}{\pgfqpoint{1.811cm}{1.202cm}}
\pgfpathcurveto{\pgfqpoint{1.775cm}{1.202cm}}{\pgfqpoint{1.74cm}{1.187cm}}{\pgfqpoint{1.714cm}{1.162cm}}
\pgfpathcurveto{\pgfqpoint{1.689cm}{1.136cm}}{\pgfqpoint{1.674cm}{1.101cm}}{\pgfqpoint{1.674cm}{1.065cm}}
\pgfpathcurveto{\pgfqpoint{1.674cm}{1.029cm}}{\pgfqpoint{1.689cm}{0.994cm}}{\pgfqpoint{1.714cm}{0.968cm}}
\pgfpathcurveto{\pgfqpoint{1.74cm}{0.942cm}}{\pgfqpoint{1.775cm}{0.928cm}}{\pgfqpoint{1.811cm}{0.928cm}}
\pgfpathcurveto{\pgfqpoint{1.847cm}{0.928cm}}{\pgfqpoint{1.882cm}{0.942cm}}{\pgfqpoint{1.907cm}{0.968cm}}
\pgfpathcurveto{\pgfqpoint{1.933cm}{0.994cm}}{\pgfqpoint{1.948cm}{1.029cm}}{\pgfqpoint{1.948cm}{1.065cm}}
\pgfusepath{fill}
\begin{pgfscope}
\pgfsetdash{}{0cm}
\pgfsetlinewidth{0.818mm}
\pgfsetmiterlimit{7.0}
\pgfpathmoveto{\pgfqpoint{1.246cm}{0.315cm}}
\pgfpathlineto{\pgfqpoint{1.244cm}{1.061cm}}
\pgfusepath{stroke}
\end{pgfscope}
\pgfpathmoveto{\pgfqpoint{1.38cm}{1.065cm}}
\pgfpathcurveto{\pgfqpoint{1.38cm}{1.101cm}}{\pgfqpoint{1.366cm}{1.136cm}}{\pgfqpoint{1.34cm}{1.162cm}}
\pgfpathcurveto{\pgfqpoint{1.315cm}{1.187cm}}{\pgfqpoint{1.28cm}{1.202cm}}{\pgfqpoint{1.244cm}{1.202cm}}
\pgfpathcurveto{\pgfqpoint{1.207cm}{1.202cm}}{\pgfqpoint{1.173cm}{1.187cm}}{\pgfqpoint{1.147cm}{1.162cm}}
\pgfpathcurveto{\pgfqpoint{1.121cm}{1.136cm}}{\pgfqpoint{1.107cm}{1.101cm}}{\pgfqpoint{1.107cm}{1.065cm}}
\pgfpathcurveto{\pgfqpoint{1.107cm}{1.029cm}}{\pgfqpoint{1.121cm}{0.994cm}}{\pgfqpoint{1.147cm}{0.968cm}}
\pgfpathcurveto{\pgfqpoint{1.173cm}{0.942cm}}{\pgfqpoint{1.207cm}{0.928cm}}{\pgfqpoint{1.244cm}{0.928cm}}
\pgfpathcurveto{\pgfqpoint{1.28cm}{0.928cm}}{\pgfqpoint{1.315cm}{0.942cm}}{\pgfqpoint{1.34cm}{0.968cm}}
\pgfpathcurveto{\pgfqpoint{1.366cm}{0.994cm}}{\pgfqpoint{1.38cm}{1.029cm}}{\pgfqpoint{1.38cm}{1.065cm}}
\pgfusepath{fill}
\begin{pgfscope}
\pgfsetdash{}{0cm}
\pgfsetlinewidth{0.818mm}
\pgfsetmiterlimit{4.0}
\pgfpathmoveto{\pgfqpoint{1.383cm}{0.178cm}}
\pgfpathcurveto{\pgfqpoint{1.383cm}{0.214cm}}{\pgfqpoint{1.369cm}{0.249cm}}{\pgfqpoint{1.343cm}{0.275cm}}
\pgfpathcurveto{\pgfqpoint{1.317cm}{0.3cm}}{\pgfqpoint{1.283cm}{0.315cm}}{\pgfqpoint{1.246cm}{0.315cm}}
\pgfpathcurveto{\pgfqpoint{1.21cm}{0.315cm}}{\pgfqpoint{1.175cm}{0.3cm}}{\pgfqpoint{1.15cm}{0.275cm}}
\pgfpathcurveto{\pgfqpoint{1.124cm}{0.249cm}}{\pgfqpoint{1.11cm}{0.214cm}}{\pgfqpoint{1.11cm}{0.178cm}}
\pgfpathcurveto{\pgfqpoint{1.11cm}{0.141cm}}{\pgfqpoint{1.124cm}{0.107cm}}{\pgfqpoint{1.15cm}{0.081cm}}
\pgfpathcurveto{\pgfqpoint{1.175cm}{0.055cm}}{\pgfqpoint{1.21cm}{0.041cm}}{\pgfqpoint{1.246cm}{0.041cm}}
\pgfpathcurveto{\pgfqpoint{1.283cm}{0.041cm}}{\pgfqpoint{1.317cm}{0.055cm}}{\pgfqpoint{1.343cm}{0.081cm}}
\pgfpathcurveto{\pgfqpoint{1.369cm}{0.107cm}}{\pgfqpoint{1.383cm}{0.141cm}}{\pgfqpoint{1.383cm}{0.178cm}}
\pgfusepath{stroke}
\end{pgfscope}
\end{pgfscope}
\end{pgfscope}
\end{pgfscope}
\end{tikzpicture}}} \|_{\CC^{- \kappa,
     \varepsilon}}  \| \rho^{4 - \sigma} \phi_{\varepsilon}^2 \|_{B^{\kappa,
     \varepsilon}_{1, 1}} \lesssim \lambda^2 \| \rho^{\sigma}
     X_{\varepsilon}^{\!\resizebox{!}{.8em}{
\begin{tikzpicture}
\pgfpathmoveto{\pgfqpoint{0cm}{-0.035cm}}
\pgfpathlineto{\pgfqpoint{1.976cm}{-0.035cm}}
\pgfpathlineto{\pgfqpoint{1.976cm}{1.94cm}}
\pgfpathlineto{\pgfqpoint{0cm}{1.94cm}}
\pgfpathclose
\pgfusepath{clip}
\begin{pgfscope}
\begin{pgfscope}
\pgfpathmoveto{\pgfqpoint{0cm}{-0.035cm}}
\pgfpathlineto{\pgfqpoint{1.976cm}{-0.035cm}}
\pgfpathlineto{\pgfqpoint{1.976cm}{1.94cm}}
\pgfpathlineto{\pgfqpoint{0cm}{1.94cm}}
\pgfpathclose
\pgfusepath{clip}
\begin{pgfscope}
\begin{pgfscope}
\pgfsetdash{}{0cm}
\pgfsetlinewidth{0.818mm}
\pgfsetroundcap
\pgfsetroundjoin
\pgfsetmiterlimit{7.0}
\definecolor{eps2pgf_color}{gray}{0}\pgfsetstrokecolor{eps2pgf_color}\pgfsetfillcolor{eps2pgf_color}
\pgfpathmoveto{\pgfqpoint{0.117cm}{1.815cm}}
\pgfpathlineto{\pgfqpoint{0.682cm}{1.065cm}}
\pgfpathlineto{\pgfqpoint{1.246cm}{1.815cm}}
\pgfusepath{stroke}
\end{pgfscope}
\definecolor{eps2pgf_color}{gray}{0}\pgfsetstrokecolor{eps2pgf_color}\pgfsetfillcolor{eps2pgf_color}
\pgfpathmoveto{\pgfqpoint{0.273cm}{1.789cm}}
\pgfpathcurveto{\pgfqpoint{0.273cm}{1.825cm}}{\pgfqpoint{0.259cm}{1.86cm}}{\pgfqpoint{0.233cm}{1.886cm}}
\pgfpathcurveto{\pgfqpoint{0.207cm}{1.912cm}}{\pgfqpoint{0.173cm}{1.926cm}}{\pgfqpoint{0.137cm}{1.926cm}}
\pgfpathcurveto{\pgfqpoint{0.1cm}{1.926cm}}{\pgfqpoint{0.066cm}{1.912cm}}{\pgfqpoint{0.04cm}{1.886cm}}
\pgfpathcurveto{\pgfqpoint{0.014cm}{1.86cm}}{\pgfqpoint{0cm}{1.825cm}}{\pgfqpoint{0cm}{1.789cm}}
\pgfpathcurveto{\pgfqpoint{0cm}{1.753cm}}{\pgfqpoint{0.014cm}{1.718cm}}{\pgfqpoint{0.04cm}{1.692cm}}
\pgfpathcurveto{\pgfqpoint{0.066cm}{1.667cm}}{\pgfqpoint{0.1cm}{1.652cm}}{\pgfqpoint{0.137cm}{1.652cm}}
\pgfpathcurveto{\pgfqpoint{0.173cm}{1.652cm}}{\pgfqpoint{0.207cm}{1.667cm}}{\pgfqpoint{0.233cm}{1.692cm}}
\pgfpathcurveto{\pgfqpoint{0.259cm}{1.718cm}}{\pgfqpoint{0.273cm}{1.753cm}}{\pgfqpoint{0.273cm}{1.789cm}}
\pgfusepath{fill}
\pgfpathmoveto{\pgfqpoint{1.345cm}{1.765cm}}
\pgfpathcurveto{\pgfqpoint{1.345cm}{1.801cm}}{\pgfqpoint{1.331cm}{1.836cm}}{\pgfqpoint{1.305cm}{1.862cm}}
\pgfpathcurveto{\pgfqpoint{1.28cm}{1.887cm}}{\pgfqpoint{1.245cm}{1.902cm}}{\pgfqpoint{1.209cm}{1.902cm}}
\pgfpathcurveto{\pgfqpoint{1.172cm}{1.902cm}}{\pgfqpoint{1.138cm}{1.887cm}}{\pgfqpoint{1.112cm}{1.862cm}}
\pgfpathcurveto{\pgfqpoint{1.087cm}{1.836cm}}{\pgfqpoint{1.072cm}{1.801cm}}{\pgfqpoint{1.072cm}{1.765cm}}
\pgfpathcurveto{\pgfqpoint{1.072cm}{1.728cm}}{\pgfqpoint{1.087cm}{1.694cm}}{\pgfqpoint{1.112cm}{1.668cm}}
\pgfpathcurveto{\pgfqpoint{1.138cm}{1.642cm}}{\pgfqpoint{1.172cm}{1.628cm}}{\pgfqpoint{1.209cm}{1.628cm}}
\pgfpathcurveto{\pgfqpoint{1.245cm}{1.628cm}}{\pgfqpoint{1.28cm}{1.642cm}}{\pgfqpoint{1.305cm}{1.668cm}}
\pgfpathcurveto{\pgfqpoint{1.331cm}{1.694cm}}{\pgfqpoint{1.345cm}{1.728cm}}{\pgfqpoint{1.345cm}{1.765cm}}
\pgfusepath{fill}
\begin{pgfscope}
\pgfsetdash{}{0cm}
\pgfsetlinewidth{0.818mm}
\pgfsetroundcap
\pgfsetroundjoin
\pgfsetmiterlimit{7.0}
\pgfpathmoveto{\pgfqpoint{0.682cm}{1.065cm}}
\pgfpathlineto{\pgfqpoint{1.246cm}{0.315cm}}
\pgfpathlineto{\pgfqpoint{1.811cm}{1.065cm}}
\pgfusepath{stroke}
\end{pgfscope}
\pgfpathmoveto{\pgfqpoint{1.948cm}{1.065cm}}
\pgfpathcurveto{\pgfqpoint{1.948cm}{1.101cm}}{\pgfqpoint{1.933cm}{1.136cm}}{\pgfqpoint{1.907cm}{1.162cm}}
\pgfpathcurveto{\pgfqpoint{1.882cm}{1.187cm}}{\pgfqpoint{1.847cm}{1.202cm}}{\pgfqpoint{1.811cm}{1.202cm}}
\pgfpathcurveto{\pgfqpoint{1.775cm}{1.202cm}}{\pgfqpoint{1.74cm}{1.187cm}}{\pgfqpoint{1.714cm}{1.162cm}}
\pgfpathcurveto{\pgfqpoint{1.689cm}{1.136cm}}{\pgfqpoint{1.674cm}{1.101cm}}{\pgfqpoint{1.674cm}{1.065cm}}
\pgfpathcurveto{\pgfqpoint{1.674cm}{1.029cm}}{\pgfqpoint{1.689cm}{0.994cm}}{\pgfqpoint{1.714cm}{0.968cm}}
\pgfpathcurveto{\pgfqpoint{1.74cm}{0.942cm}}{\pgfqpoint{1.775cm}{0.928cm}}{\pgfqpoint{1.811cm}{0.928cm}}
\pgfpathcurveto{\pgfqpoint{1.847cm}{0.928cm}}{\pgfqpoint{1.882cm}{0.942cm}}{\pgfqpoint{1.907cm}{0.968cm}}
\pgfpathcurveto{\pgfqpoint{1.933cm}{0.994cm}}{\pgfqpoint{1.948cm}{1.029cm}}{\pgfqpoint{1.948cm}{1.065cm}}
\pgfusepath{fill}
\begin{pgfscope}
\pgfsetdash{}{0cm}
\pgfsetlinewidth{0.818mm}
\pgfsetmiterlimit{7.0}
\pgfpathmoveto{\pgfqpoint{1.246cm}{0.315cm}}
\pgfpathlineto{\pgfqpoint{1.244cm}{1.061cm}}
\pgfusepath{stroke}
\end{pgfscope}
\pgfpathmoveto{\pgfqpoint{1.38cm}{1.065cm}}
\pgfpathcurveto{\pgfqpoint{1.38cm}{1.101cm}}{\pgfqpoint{1.366cm}{1.136cm}}{\pgfqpoint{1.34cm}{1.162cm}}
\pgfpathcurveto{\pgfqpoint{1.315cm}{1.187cm}}{\pgfqpoint{1.28cm}{1.202cm}}{\pgfqpoint{1.244cm}{1.202cm}}
\pgfpathcurveto{\pgfqpoint{1.207cm}{1.202cm}}{\pgfqpoint{1.173cm}{1.187cm}}{\pgfqpoint{1.147cm}{1.162cm}}
\pgfpathcurveto{\pgfqpoint{1.121cm}{1.136cm}}{\pgfqpoint{1.107cm}{1.101cm}}{\pgfqpoint{1.107cm}{1.065cm}}
\pgfpathcurveto{\pgfqpoint{1.107cm}{1.029cm}}{\pgfqpoint{1.121cm}{0.994cm}}{\pgfqpoint{1.147cm}{0.968cm}}
\pgfpathcurveto{\pgfqpoint{1.173cm}{0.942cm}}{\pgfqpoint{1.207cm}{0.928cm}}{\pgfqpoint{1.244cm}{0.928cm}}
\pgfpathcurveto{\pgfqpoint{1.28cm}{0.928cm}}{\pgfqpoint{1.315cm}{0.942cm}}{\pgfqpoint{1.34cm}{0.968cm}}
\pgfpathcurveto{\pgfqpoint{1.366cm}{0.994cm}}{\pgfqpoint{1.38cm}{1.029cm}}{\pgfqpoint{1.38cm}{1.065cm}}
\pgfusepath{fill}
\begin{pgfscope}
\pgfsetdash{}{0cm}
\pgfsetlinewidth{0.818mm}
\pgfsetmiterlimit{4.0}
\pgfpathmoveto{\pgfqpoint{1.383cm}{0.178cm}}
\pgfpathcurveto{\pgfqpoint{1.383cm}{0.214cm}}{\pgfqpoint{1.369cm}{0.249cm}}{\pgfqpoint{1.343cm}{0.275cm}}
\pgfpathcurveto{\pgfqpoint{1.317cm}{0.3cm}}{\pgfqpoint{1.283cm}{0.315cm}}{\pgfqpoint{1.246cm}{0.315cm}}
\pgfpathcurveto{\pgfqpoint{1.21cm}{0.315cm}}{\pgfqpoint{1.175cm}{0.3cm}}{\pgfqpoint{1.15cm}{0.275cm}}
\pgfpathcurveto{\pgfqpoint{1.124cm}{0.249cm}}{\pgfqpoint{1.11cm}{0.214cm}}{\pgfqpoint{1.11cm}{0.178cm}}
\pgfpathcurveto{\pgfqpoint{1.11cm}{0.141cm}}{\pgfqpoint{1.124cm}{0.107cm}}{\pgfqpoint{1.15cm}{0.081cm}}
\pgfpathcurveto{\pgfqpoint{1.175cm}{0.055cm}}{\pgfqpoint{1.21cm}{0.041cm}}{\pgfqpoint{1.246cm}{0.041cm}}
\pgfpathcurveto{\pgfqpoint{1.283cm}{0.041cm}}{\pgfqpoint{1.317cm}{0.055cm}}{\pgfqpoint{1.343cm}{0.081cm}}
\pgfpathcurveto{\pgfqpoint{1.369cm}{0.107cm}}{\pgfqpoint{1.383cm}{0.141cm}}{\pgfqpoint{1.383cm}{0.178cm}}
\pgfusepath{stroke}
\end{pgfscope}
\end{pgfscope}
\end{pgfscope}
\end{pgfscope}
\end{tikzpicture}}} \|_{\CC^{- \kappa, \varepsilon}}  \| \rho^{1
     + \iota} \phi_{\varepsilon} \|_{L^{2, \varepsilon}}  \| \rho^{3 - \iota -
     \sigma} \phi_{\varepsilon} \|_{H^{2 \kappa, \varepsilon}} \]
  \[ \lesssim \lambda^2 \|\mathbb{X}_{\varepsilon} \|^4  \| \rho
     \phi_{\varepsilon} \|^{1 + \theta}_{L^{4, \varepsilon}}  \| \rho^2
     \phi_{\varepsilon} \|^{1 - \theta}_{H^{1 - 2 \kappa, \varepsilon}}
     \leqslant \lambda^{(7 - \theta) / (1 + \theta)} C_{\rho}
     \|\mathbb{X}_{\varepsilon} \|^{8 + \vartheta} + \delta (\lambda \| \rho
     \phi_{\varepsilon} \|_{L^{4, \varepsilon}}^4 +\| \rho^2
     \phi_{\varepsilon} \|_{H^{1 - 2 \kappa, \varepsilon}}^2) . \]
  Recall that since $\sigma$ is chosen small, we have the interpolation
  inequality (see Lemma~\ref{lem:int})
  \[ \| \phi_{\varepsilon} \|_{H^{1 / 2 + \kappa, \varepsilon} (\rho^{2 -
     \sigma / 2})} \leqslant \| \phi_{\varepsilon} \|_{L^{2, \varepsilon}
     (\rho^{1 + \iota})}^{\theta} \| \phi_{\varepsilon} \|_{H^{1 - 2 \kappa,
     \varepsilon} (\rho^2)}^{1 - \theta} \]
  where $\theta = \frac{1 / 2 - 3 \kappa}{1 - 2 \kappa}$. Similar
  interpolation inequalities will also be employed below. Then, in view of
  Lemma~\ref{lem:dual1} and Young's inequality, we have
  \[ \lambda |D_{\rho^4, \varepsilon} (\phi_{\varepsilon}, - 3 \llbracket
     X_{\varepsilon}^2 \rrbracket, \phi_{\varepsilon}) | \lesssim \lambda \|
     \rho^{\sigma} \llbracket X_{\varepsilon}^2 \rrbracket \|_{\CC^{- 1 -
     \kappa, \varepsilon}}  \| \rho^{2 - \sigma / 2} \phi_{\varepsilon}
     \|_{H^{1 / 2 + \kappa, \varepsilon}}^2 \]
  \[ \lesssim \lambda \| \rho^{\sigma} \llbracket X_{\varepsilon}^2 \rrbracket
     \|_{\CC^{- 1 - \kappa, \varepsilon}} \| \rho^{1 + \iota}
     \phi_{\varepsilon} \|_{L^{2, \varepsilon}}^{2 \theta}  \| \rho^2
     \phi_{\varepsilon} \|_{H^{1 - 2 \kappa, \varepsilon}}^{2 (1 - \theta)} \]
  \[ \lesssim \lambda \|\mathbb{X}_{\varepsilon} \|^2  \| \rho
     \phi_{\varepsilon} \|_{L^{4, \varepsilon}}^{2 \theta}  \| \rho^2
     \phi_{\varepsilon} \|_{H^{1 - 2 \kappa, \varepsilon}}^{2 (1 - \theta)}
     \leqslant \lambda^{2 / \theta - 1} C_{\delta} \|\mathbb{X}_{\varepsilon}
     \|^{8 + \vartheta} + \delta (\lambda \| \rho \phi_{\varepsilon} \|_{L^{4,
     \varepsilon}}^4 +\| \rho^2 \phi_{\varepsilon} \|_{H^{1 - 2 \kappa,
     \varepsilon}}^2) . \]
  Similarly,
  \[ \lambda^2 \left| D_{\rho^4, \varepsilon} \left( \phi_{\varepsilon}, 3
     \llbracket X_{\varepsilon}^2 \rrbracket, \Q_{\varepsilon}^{- 1} [3
     \llbracket X_{\varepsilon}^2 \rrbracket \succ \phi_{\varepsilon}] \right)
     \right| \]
  \[ \lesssim \lambda^2  \| \rho^{\sigma} \llbracket X_{\varepsilon}^2
     \rrbracket \|_{\CC^{- 1 - \kappa, \varepsilon}} \| \rho^{3 - \iota - 2
     \sigma} \phi_{\varepsilon} \|_{H^{4 \kappa, \varepsilon}} \left| \rho^{1
     + \iota + \sigma} \Q_{\varepsilon}^{- 1} [3 \llbracket X_{\varepsilon}^2
     \rrbracket \succ \phi_{\varepsilon}] \right|_{H^{1 - 2 \kappa,
     \varepsilon}}, \]
  where we further estimate by Schauder and paraproduct estimates
  \[ \left| \rho^{1 + \iota + \sigma} \Q_{\varepsilon}^{- 1} [3 \llbracket
     X_{\varepsilon}^2 \rrbracket \succ \phi_{\varepsilon}] \right|_{H^{1 - 2
     \kappa, \varepsilon}} \lesssim \| \rho^{1 + \iota + \sigma} \llbracket
     X_{\varepsilon}^2 \rrbracket \succ \phi_{\varepsilon} \|_{H^{- 1 - 2
     \kappa, \varepsilon}} \]
  \[ \lesssim \| \rho^{\sigma} \llbracket X_{\varepsilon}^2 \rrbracket
     \|_{\CC^{- 1 - \kappa, \varepsilon}}  \| \rho^{1 + \iota}
     \phi_{\varepsilon} \|_{L^{2, \varepsilon}} \]
  and hence we deduce by interpolation with $\theta = \frac{1 - 6 \kappa}{1 -
  2 \kappa}$ and embedding that
  \[ \lambda^2 \left| D_{\rho^4, \varepsilon} \left( \phi_{\varepsilon}, 3
     \llbracket X_{\varepsilon}^2 \rrbracket, \Q_{\varepsilon}^{- 1} [3
     \llbracket X_{\varepsilon}^2 \rrbracket \succ \phi_{\varepsilon}] \right)
     \right| \lesssim \lambda^2 \|\mathbb{X}_{\varepsilon} \|^4  \| \rho^{1 +
     \iota} \phi_{\varepsilon} \|_{L^{2, \varepsilon}}  \| \rho^2
     \phi_{\varepsilon} \|_{H^{4 \kappa, \varepsilon}} \]
  \[ \lesssim \lambda^2 \|\mathbb{X}_{\varepsilon} \|^4  \| \rho
     \phi_{\varepsilon} \|^{1 + \theta}_{L^{2, \varepsilon}}  \| \rho^2
     \phi_{\varepsilon} \|^{1 - \theta}_{H^{1 - 2 \kappa, \varepsilon}} \]
  \[ \leqslant \lambda^{(7 - \theta) / (1 + \theta)} C_{\delta}
     \|\mathbb{X}_{\varepsilon} \|^{8 + \vartheta} + \delta (\lambda \| \rho
     \phi_{\varepsilon} \|_{L^{4, \varepsilon}}^4 +\| \rho^2
     \phi_{\varepsilon} \|_{H^{1 - 2 \kappa, \varepsilon}}^2) . \]
  Due to Lemma~\ref{lem:comm1} and interpolation with $\theta =
  \frac{1 - 5 \kappa}{1 - 2 \kappa}$, we obtain 
  \[ \lambda^2 | \langle \rho^4 \phi_{\varepsilon}, \tilde{C}
     (\phi_{\varepsilon}, 3 \llbracket X_{\varepsilon}^2 \rrbracket, 3
     \llbracket X_{\varepsilon}^2 \rrbracket) \rangle_{\varepsilon} | \lesssim
     \lambda^2 \| \rho^{\sigma} \llbracket X_{\varepsilon}^2 \rrbracket
     \|_{\CC^{- 1 - \kappa, \varepsilon}}^2  \| \rho^{2 - \sigma}
     \phi_{\varepsilon} \|^2_{H^{3 \kappa, \varepsilon}} \]
  \[ \leqslant \lambda^2 C_{\delta} \|\mathbb{X}_{\varepsilon} \|^4  \|
     \rho^{1 + \iota} \phi_{\varepsilon} \|^{2 \theta}_{L^{2, \varepsilon}} 
     \| \rho^2 \phi_{\varepsilon} \|^{2 (1 - \theta)}_{H^{1 - 2 \kappa,
     \varepsilon}} \]
  \[ \leqslant \lambda^{4 / \theta - 1} C_{\delta} \|\mathbb{X}_{\varepsilon}
     \|^{8 + \vartheta} + \delta (\lambda \| \rho \phi_{\varepsilon} \|_{L^{4,
     \varepsilon}}^4 +\| \rho^2 \phi_{\varepsilon} \|_{H^{1 - 2 \kappa,
     \varepsilon}}^2) . \]
  Then we use the paraproduct estimates, the embedding $\CC^{1 / 2 - \kappa,
  \varepsilon} (\rho^{\sigma}) \subset H^{1 / 2 - 2 \kappa, \varepsilon}
  (\rho^{2 - \sigma / 2})$ (which holds due to the integrability of $\rho^{4
  \iota}$ for some $\iota \in (0, 1)$ and the fact that $\sigma$ can be chosen
  small), together with Lemma~\ref{lem:Y1} and interpolation to deduce for
  $\theta = \frac{1 / 2 - 5 \kappa}{1 - 2 \kappa}$ that
  \[ \lambda | \langle \rho^4 \phi_{\varepsilon}, - 3 \llbracket
     X_{\varepsilon}^2 \rrbracket \prec (Y_{\varepsilon} + \phi_{\varepsilon})
     \rangle_{\varepsilon} | \]
  \[ \lesssim \lambda \| \rho^{\sigma} \llbracket X_{\varepsilon}^2 \rrbracket
     \|_{\CC^{- 1 - \kappa, \varepsilon}} \| \rho^{2 - \sigma / 2}
     (Y_{\varepsilon} + \phi_{\varepsilon})\|_{H^{1 / 2 - 2 \kappa,
     \varepsilon}}  \| \rho^{2 - \sigma / 2} \phi_{\varepsilon} \|_{H^{1 / 2 +
     3 \kappa, \varepsilon}} \]
  \[ \lesssim \lambda \| \rho^{\sigma} \llbracket X_{\varepsilon}^2 \rrbracket
     \|_{\CC^{- 1 - \kappa, \varepsilon}} \| \rho^{2 - \sigma / 2}
     Y_{\varepsilon} \|_{H^{1 / 2 - 2 \kappa, \varepsilon}}  \| \rho^{2 -
     \sigma / 2} \phi_{\varepsilon} \|_{H^{1 / 2 + 3 \kappa, \varepsilon}} \]
  \[ + \lambda \| \rho^{\sigma} \llbracket X_{\varepsilon}^2 \rrbracket
     \|_{\CC^{- 1 - \kappa, \varepsilon}}  \| \rho^{2 - \sigma / 2}
     \phi_{\varepsilon} \|_{H^{1 / 2 + 3 \kappa, \varepsilon}}^2 \]
  \[ \lesssim \lambda  (\lambda \|\mathbb{X}_{\varepsilon} \|^5 \|
     \rho^{1 + \iota} \phi_{\varepsilon} \|_{L^{2, \varepsilon}}^{\theta} \|
     \rho^2 \phi_{\varepsilon} \|_{H^{1 - 2 \kappa, \varepsilon}}^{1 - \theta}
     +\|\mathbb{X}_{\varepsilon} \|^2 \| \rho^{1 + \iota} \phi_{\varepsilon}
     \|_{L^{2, \varepsilon}}^{2 \theta} \| \rho^2 \phi_{\varepsilon} \|_{H^{1
     - 2 \kappa, \varepsilon}}^{2 (1 - \theta)}) \]
  \[ \leqslant (\lambda^{(8 - \theta) / (2 + \theta)} + \lambda^{2 / \theta -
     1}) C_{\delta} \|\mathbb{X}_{\varepsilon} \|^{8 + \vartheta} + \delta
     (\lambda \| \rho \phi_{\varepsilon} \|_{L^{4, \varepsilon}}^4 +\| \rho^2
     \phi_{\varepsilon} \|_{H^{1 - 2 \kappa, \varepsilon}}^2) . \]
  Next, we have
  \[ \lambda | \langle \rho^4 \phi_{\varepsilon}, - 3 X_{\varepsilon}
     (Y_{\varepsilon} + \phi_{\varepsilon})^2 \rangle_{\varepsilon} | \lesssim
     \lambda \| \rho^{\sigma} X_{\varepsilon} \|_{\CC^{- 1 / 2 - \kappa,
     \varepsilon}}  \| \rho^{4 - \sigma} \phi_{\varepsilon}^3 \|_{B_{1, 1}^{1
     / 2 + \kappa, \varepsilon}} \]
  \[ + \lambda \| \rho^{\sigma} X_{\varepsilon} Y_{\varepsilon} \|_{\CC^{- 1 /
     2 - \kappa, \varepsilon}}  \| \rho^{4 - \sigma} \phi_{\varepsilon}^2
     \|_{B_{1, 1}^{1 / 2 + \kappa, \varepsilon}} + \lambda \| \rho^{\sigma}
     X_{\varepsilon} Y_{\varepsilon}^2 \|_{\CC^{- 1 / 2 - \kappa,
     \varepsilon}}  \| \rho^{4 - \sigma} \phi_{\varepsilon} \|_{B_{1, 1}^{1 /
     2 + \kappa, \varepsilon}} . \]
  Here we employ Lemma~\ref{lem:mult} and interpolation to obtain for $\theta
  = \frac{1 / 2 - 4 \kappa}{1 - 2 \kappa}$
  \[ \lambda \| \rho^{\sigma} X_{\varepsilon} \|_{\CC^{- 1 / 2 - \kappa,
     \varepsilon}}  \| \rho^{4 - \sigma} \phi_{\varepsilon}^3 \|_{B_{1, 1}^{1
     / 2 + \kappa, \varepsilon}} \lesssim \lambda \| \rho^{\sigma}
     X_{\varepsilon} \|_{\CC^{- 1 / 2 - \kappa, \varepsilon}}  \| \rho
     \phi_{\varepsilon} \|^2_{L^{4, \varepsilon}}  \| \rho^{2 - \sigma}
     \phi_{\varepsilon} \|_{H^{1 / 2 + 2 \kappa, \varepsilon}} \]
  \[ \lesssim \lambda \|\mathbb{X}_{\varepsilon} \|  \| \rho
     \phi_{\varepsilon} \|_{L^{4, \varepsilon}}^{2 + \theta}  \| \rho^2
     \phi_{\varepsilon} \|_{H^{1 - 2 \kappa, \varepsilon}}^{1 - \theta}
     \leqslant \lambda^{(2 - \theta) / \theta} C_{\delta}
     \|\mathbb{X}_{\varepsilon} \|^{8 + \vartheta} + \delta (\lambda \| \rho
     \phi_{\varepsilon} \|_{L^{4, \varepsilon}}^4 +\| \rho^2
     \phi_{\varepsilon} \|_{H^{1 - 2 \kappa, \varepsilon}}^2) \]
  and similarly for the other two terms, where we also use Lemma~\ref{lem:Z}
  and the embedding $H^{1 - 2 \kappa, \varepsilon} (\rho^2) \subset H^{1 / 2 +
  2 \kappa, \varepsilon} (\rho^{3 - \iota - \sigma})$ and $H^{1 / 2 + 2
  \kappa, \varepsilon} (\rho^2) = B_{2, 2}^{1 / 2 + 2 \kappa, \varepsilon}
  (\rho^2) \subset B_{1, 1}^{1 / 2 + \kappa, \varepsilon} (\rho^{4 - \sigma})$
  together with interpolation with $\theta = \frac{1 / 2 - 4 \kappa}{1 - 2
  \kappa}$
  \[ \lambda \| \rho^{\sigma} X_{\varepsilon} Y_{\varepsilon} \|_{\CC^{- 1 / 2
     - \kappa, \varepsilon}}  \| \rho^{4 - \sigma} \phi_{\varepsilon}^2
     \|_{B_{1, 1}^{1 / 2 + \kappa, \varepsilon}} + \lambda \| \rho^{\sigma}
     X_{\varepsilon} Y_{\varepsilon}^2 \|_{\CC^{- 1 / 2 - \kappa,
     \varepsilon}}  \| \rho^{4 - \sigma} \phi_{\varepsilon} \|_{B_{1, 1}^{1 /
     2 + \kappa, \varepsilon}} \]
  \[ \lesssim (\lambda^2 + \lambda^3) \|\mathbb{X}_{\varepsilon} \|^6  \|
     \rho^{1 + \iota} \phi_{\varepsilon} \|_{L^{2, \varepsilon}}  \| \rho^{3 -
     \iota - \sigma} \phi_{\varepsilon} \|_{H^{1 / 2 + 2 \kappa, \varepsilon}}
     + (\lambda^3 + \lambda^4) \|\mathbb{X}_{\varepsilon} \|^9  \| \rho^2
     \phi_{\varepsilon} \|_{H^{1 / 2 + 2 \kappa, \varepsilon}} \]
  \[ \lesssim (\lambda^2 + \lambda^3) \|\mathbb{X}_{\varepsilon} \|^6  \| \rho
     \phi_{\varepsilon} \|^{1 + \theta}_{L^{4, \varepsilon}}  \| \rho^2
     \phi_{\varepsilon} \|^{1 - \theta}_{H^{1 - 2 \kappa, \varepsilon}} +
     (\lambda^3 + \lambda^4) \|\mathbb{X}_{\varepsilon} \|^9  \| \rho
     \phi_{\varepsilon} \|^{\theta}_{L^{4, \varepsilon}}  \| \rho^2
     \phi_{\varepsilon} \|^{1 - \theta}_{H^{1 - 2 \kappa, \varepsilon}} \]
  \begin{equation}
  \label{eq:XY}
   \leqslant (\lambda^{(11 - \theta) / (2 + \theta)} + \lambda^{(12 -
     \theta) / (2 + \theta)}) C_{\delta} \|\mathbb{X}_{\varepsilon} \|^{16 +
     \vartheta} + \delta (\lambda \| \rho \phi_{\varepsilon} \|_{L^{4,
     \varepsilon}}^4 +\| \rho^2 \phi_{\varepsilon} \|_{H^{1 - 2 \kappa,
     \varepsilon}}^2) .
     \end{equation}
  Next, we obtain
  \begin{equation}\label{eq:Y3}
   \lambda | \langle \rho^4 \phi_{\varepsilon}, - Y_{\varepsilon}^3
     \rangle_{\varepsilon} | \lesssim \lambda \| \rho^{\sigma} Y_{\varepsilon}
     \|_{L^{\infty, \varepsilon}}^3  \| \rho^{4 - 3 \sigma} \phi_{\varepsilon}
     \|_{L^{1, \varepsilon}} \lesssim \lambda^4 \|\mathbb{X}_{\varepsilon}
     \|^9  \| \rho \phi_{\varepsilon} \|_{L^{4, \varepsilon}} \leqslant
     \lambda^5 C_{\delta} \|\mathbb{X}_{\varepsilon} \|^{12} + \delta \lambda
     \| \rho \phi_{\varepsilon} \|^4_{L^{4, \varepsilon}}, 
     \end{equation}
  and similarly
  \[ \lambda | \langle \rho^4 \phi_{\varepsilon}, - 3 Y_{\varepsilon}^2
     \phi_{\varepsilon} \rangle_{\varepsilon} | \lesssim \lambda \|
     \rho^{\sigma} Y_{\varepsilon} \|_{L^{\infty, \varepsilon}}^2  \| \rho^{4
     - \sigma} \phi_{\varepsilon}^2 \|_{L^{1, \varepsilon}} \]
    \begin{equation}\label{eq:Y2}
     \lesssim \lambda^3 \|\mathbb{X}_{\varepsilon} \|^6  \| \rho
     \phi_{\varepsilon} \|^2_{L^{4, \varepsilon}} \leqslant \lambda^5
     C_{\delta} \|\mathbb{X}_{\varepsilon} \|^{12} + \delta \lambda \| \rho
     \phi_{\varepsilon} \|_{L^{4, \varepsilon}}^4, 
     \end{equation}
  \[ \lambda | \langle \rho^4 \phi_{\varepsilon}, - 3 Y_{\varepsilon}
     \phi_{\varepsilon}^2 \rangle_{\varepsilon} | \lesssim \lambda \|
     \rho^{\sigma} Y_{\varepsilon} \|_{L^{\infty, \varepsilon}}  \| \rho^{4 -
     \sigma} \phi_{\varepsilon}^3 \|_{L^{1, \varepsilon}} \lesssim \lambda \|
     \rho^{\sigma} Y_{\varepsilon} \|_{L^{\infty, \varepsilon}}  \| \rho
     \phi_{\varepsilon} \|^3_{L^{4, \varepsilon}} \]
  \begin{equation}\label{eq:Y11}
   \lesssim \lambda^2 \|\mathbb{X}_{\varepsilon} \|^3  \| \rho
     \phi_{\varepsilon} \|_{L^{4, \varepsilon}}^3 \leqslant \lambda^5
     C_{\delta} \|\mathbb{X}_{\varepsilon} \|^{12} + \delta \lambda \| \rho
     \phi_{\varepsilon} \|_{L^{4, \varepsilon}}^4 .
     \end{equation}
  Then, by {\eqref{eq:U11}}
  \[ \lambda \left| \langle \rho^4 \phi_{\varepsilon}, - 3
     (\UU^{\varepsilon}_{\leqslant} \llbracket X^2 \rrbracket) \succ
     Y_{\varepsilon} \rangle_{\varepsilon} \right| \lesssim \lambda \|
     \rho^{\sigma} \UU^{\varepsilon}_{\leqslant} \llbracket X_{\varepsilon}^2
     \rrbracket \|_{\CC^{- 1 + 3 \kappa, \varepsilon}} \| \rho^{\sigma}
     Y_{\varepsilon} \|_{L^{\infty, \varepsilon}}  \| \rho^{4 - 3 \sigma}
     \phi_{\varepsilon} \|_{B^{1 - 3 \kappa, \varepsilon}_{1, 1}} \]
  \[ \lesssim \lambda (1 + \lambda \| \rho^{\sigma} \llbracket
     X_{\varepsilon}^2 \rrbracket \|_{\CC^{- 1 - \kappa, \varepsilon}})^{8
     \kappa}  \| \rho^{\sigma} \llbracket X_{\varepsilon}^2 \rrbracket
     \|_{\CC^{- 1 - \kappa, \varepsilon}} \| \rho^{\sigma} Y_{\varepsilon}
     \|_{L^{\infty, \varepsilon}}  \| \rho^2 \phi_{\varepsilon} \|_{H^{1 - 2
     \kappa, \varepsilon}} \]
  \begin{equation}\label{eq:U0}
   \lesssim (\lambda^{2} + \lambda^{2 + 8 \kappa}) \|\mathbb{X}_{\varepsilon}
     \|^{5 + 16 \kappa}  \| \rho^2 \phi_{\varepsilon} \|_{H^{1 - 2 \kappa,
     \varepsilon}} \leqslant (\lambda^4 + \lambda^5) C_{\delta}
     \|\mathbb{X}_{\varepsilon} \|^{10 + \vartheta} + \delta \| \rho^2
     \phi_{\varepsilon} \|_{H^{1 - 2 \kappa, \varepsilon}}^2,
     \end{equation}
  and finally for $\theta = \frac{1 / 2 - 4 \kappa}{1 - 2 \kappa}$
  \[ \lambda^2 | \langle \rho^4 \phi_{\varepsilon}, Z_{\varepsilon}
     \rangle_{\varepsilon} | \lesssim \lambda^2  \| \rho^{\sigma}
     Z_{\varepsilon} \|_{\CC^{- 1 / 2 - \kappa, \varepsilon}}  \| \rho^{4 -
     \sigma} \phi_{\varepsilon} \|_{B_{1, 1}^{1 / 2 + \kappa, \varepsilon}} \]
  \[ \lesssim (\lambda^2 + \lambda^3 | \log t| + \lambda^4)
     \|\mathbb{X}_{\varepsilon} \|^{7+\vartheta}  \| \rho \phi_{\varepsilon} \|_{L^{4,
     \varepsilon}}^{\theta}  \| \rho^2 \phi \|_{H^{1 - 2 \kappa}}^{1 - \theta}
  \]
 \[
   \leqslant (\lambda^{(8 - \theta) / (2 + \theta)} + \lambda^{(12 - \theta)
     / (2 + \theta)} | \log t|^{4 / (2 + \theta)} + \lambda^{(16 - \theta) /
     (2 + \theta)}) C_{\delta} \|\mathbb{X}_{\varepsilon} \|^{12} \]
   \begin{equation}\label{eq:Z0}
    + \delta (\lambda \| \rho \phi_{\varepsilon} \|_{L^{4, \varepsilon}}^4
     +\| \rho^2 \phi_{\varepsilon} \|_{H^{1 - 2 \kappa, \varepsilon}}^2) .
     \end{equation}
  The proof is complete.
\end{proof}

Now we have all in hand to establish our main energy estimate.

\begin{theorem}
  \label{th:energy-estimate}
  Let $\rho$ be a weight such that $\rho^{\iota} \in
  L^{4, 0}$ for some $\iota \in (0, 1)$. There exists a constant $\alpha=\alpha(m^{2}) \in (0,1)$  such that for    $\theta=\frac{1/2-4\kappa}{1-2\kappa}$
  \begin{equation}
    \frac{1}{2} \partial_t \| \rho^2 \phi_{\varepsilon} \|_{L^{2,
    \varepsilon}}^2 + \alpha [\lambda \| \rho \phi_{\varepsilon} \|_{L^{4,
    \varepsilon}}^4 +  m^2 \| \rho^2 \psi_{\varepsilon} \|_{L^{2,
    \varepsilon}}^2 +  \| \rho^2 \nabla_{\varepsilon} \psi_{\varepsilon}
    \|_{L^{2, \varepsilon}}^2] + \| \rho^2 \phi_{\varepsilon} \|_{H^{1 - 2
    \kappa, \varepsilon}}^2 \label{eq:d18}
  \end{equation}
  \[ \leqslant (\lambda^3 + \lambda^{(12 - \theta)
     / (2 + \theta)} | \log t|^{4 / (2 + \theta)} + \lambda^{7}) Q_{\rho} (\mathbb{X}_{\varepsilon}) . \]
\end{theorem}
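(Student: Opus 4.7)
The plan is to combine the energy identity of Lemma~\ref{lem:energy12} with the estimates of Lemma~\ref{lemma:bounds-rhs1}, and then to recover the extra term $\|\rho^2\phi_\varepsilon\|^2_{H^{1-2\kappa,\varepsilon}}$ on the left-hand side of \eqref{eq:d18} by inverting the paracontrolled ansatz \eqref{eq:psi1}. First, I would substitute the bound from Lemma~\ref{lemma:bounds-rhs1} into the identity~\eqref{eq:en12}. The $\delta$-small multiples of $\lambda\|\rho\phi_\varepsilon\|^4_{L^{4,\varepsilon}}$, $m^2\|\rho^2\psi_\varepsilon\|^2_{L^{2,\varepsilon}}$, and $\|\rho^2\nabla_\varepsilon\psi_\varepsilon\|^2_{L^{2,\varepsilon}}$ appearing on the right can be immediately absorbed into their counterparts on the left, provided $\delta$ is chosen sufficiently small. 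This leaves, however, a residual $\delta\|\rho^2\phi_\varepsilon\|^2_{H^{1-2\kappa,\varepsilon}}$ on the right and no such term yet on the left.

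Second, I would manufacture $\|\rho^2\phi_\varepsilon\|^2_{H^{1-2\kappa,\varepsilon}}$ as a controlled quantity by using the definition $\phi_\varepsilon = \psi_\varepsilon - \mathscr{Q}_\varepsilon^{-1}[3\lambda\llbracket X_\varepsilon^2\rrbracket\succ\phi_\varepsilon]$ from \eqref{eq:psi1}. The $\psi_\varepsilon$ contribution is controlled by $\|\rho^2\psi_\varepsilon\|^2_{L^{2,\varepsilon}} + \|\rho^2\nabla_\varepsilon\psi_\varepsilon\|^2_{L^{2,\varepsilon}}$, up to commutator terms with the weight which are harmless (and already appeared in the proof of Lemma~\ref{lemma:bounds-rhs1}). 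For the paracontrolled correction, Schauder estimates together with the weighted paraproduct bounds give
\[
\|\rho^2\mathscr{Q}_\varepsilon^{-1}[\llbracket X_\varepsilon^2\rrbracket\succ\phi_\varepsilon]\|_{H^{1-2\kappa,\varepsilon}} \lesssim \|\llbracket X_\varepsilon^2\rrbracket\|_{\CC^{-1-\kappa,\varepsilon}(\rho^\sigma)}\|\rho^{2-\sigma/2}\phi_\varepsilon\|_{L^{2,\varepsilon}},
\]
and the weight integrability from Lemma~\ref{lem:15} yields $\|\rho^{2-\sigma/2}\phi_\varepsilon\|_{L^{2,\varepsilon}}\lesssim\|\rho\phi_\varepsilon\|_{L^{4,\varepsilon}}$, since $\sigma$ is chosen small. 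Squaring and applying Young's inequality, one gains a factor $\lambda^2\|\mathbb{X}_\varepsilon\|^4\|\rho\phi_\varepsilon\|^2_{L^{4,\varepsilon}}\leqslant C\lambda^3 Q_\rho(\mathbb{X}_\varepsilon) + \delta\lambda\|\rho\phi_\varepsilon\|^4_{L^{4,\varepsilon}}$, which fits into the bound claimed in the theorem.

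Third, having this supplementary inequality of the form
\[
\|\rho^2\phi_\varepsilon\|^2_{H^{1-2\kappa,\varepsilon}} \leqslant C\bigl(\|\rho^2\psi_\varepsilon\|^2_{L^{2,\varepsilon}} + \|\rho^2\nabla_\varepsilon\psi_\varepsilon\|^2_{L^{2,\varepsilon}}\bigr) + C\lambda^3 Q_\rho(\mathbb{X}_\varepsilon) + \delta\lambda\|\rho\phi_\varepsilon\|^4_{L^{4,\varepsilon}},
\]
I add a small multiple of it to the energy identity so that $\|\rho^2\phi_\varepsilon\|^2_{H^{1-2\kappa,\varepsilon}}$ appears on the left, after which a final choice of the small constants produces a genuinely positive $\alpha=\alpha(m^2)\in(0,1)$ in front of the whole left-hand side. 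All error terms on the right are then of the form $(\lambda^3+\lambda^{(12-\theta)/(2+\theta)}|\log t|^{4/(2+\theta)}+\lambda^7)Q_\rho(\mathbb{X}_\varepsilon)$, matching the claim.

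The main obstacle is the bookkeeping in the second step: the correction $\mathscr{Q}_\varepsilon^{-1}[\llbracket X^2_\varepsilon\rrbracket\succ\phi_\varepsilon]$ couples the noise and the remainder, so one must check that the resulting contribution does not require a fractionally higher regularity of $\phi_\varepsilon$ than is available, and that its absorption can be balanced against the coercive $L^{4,\varepsilon}$ quartic term with only the polynomial powers of $\lambda$ and $\|\mathbb{X}_\varepsilon\|$ announced in the statement. Everything else is an assembly of the identity from Lemma~\ref{lem:energy12}, the estimates of Lemma~\ref{lemma:bounds-rhs1}, and standard Young/interpolation manipulations already used in those lemmas.
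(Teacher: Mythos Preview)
Your proposal is correct and follows essentially the same route as the paper: the paper first derives the supplementary inequality (your step~2, which is exactly eq.~\eqref{eq:17}) via \eqref{eq:psi1}, Schauder and paraproduct estimates, and the embedding $\|\rho^{2-\sigma}\phi_\varepsilon\|_{L^{2,\varepsilon}}\lesssim\|\rho\phi_\varepsilon\|_{L^{4,\varepsilon}}$, and then combines it with Lemma~\ref{lem:energy12} and Lemma~\ref{lemma:bounds-rhs1} to absorb the $\delta$-terms; the only difference is the order of presentation. Your concern about needing extra regularity of $\phi_\varepsilon$ is unfounded: since $\mathscr{Q}_\varepsilon^{-1}$ gains two derivatives and the paraproduct $\llbracket X_\varepsilon^2\rrbracket\succ\phi_\varepsilon$ sits in $H^{-1-\kappa,\varepsilon}$ using only $\|\phi_\varepsilon\|_{L^{2,\varepsilon}}$, the bound closes at the $L^{4,\varepsilon}$ level as you indicated.
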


\begin{proof}
  As a consequence of {\eqref{eq:psi1}}, we have according to Lemma~\ref{lem:grad}, Lemma~\ref{lem:emb}, Lemma~\ref{lem:equiv2}
  \[ \| \rho^2 \phi_{\varepsilon} \|_{H^{1 - 2 \kappa, \varepsilon}}^2
     \lesssim \left\| \rho^2 \Q_{\varepsilon}^{- 1} [3\lambda\llbracket
     X_{\varepsilon}^2 \rrbracket \succ \phi_{\varepsilon}] \right\|_{H^{1 - 2
     \kappa, \varepsilon}}^2 + \| \rho^2 \psi_{\varepsilon} \|_{H^{1 - 2
     \kappa, \varepsilon}}^2 \]
  \[ \lesssim \lambda^2 \| \rho^{\sigma} \llbracket X_{\varepsilon}^2 \rrbracket
     \|^2_{\CC^{- 1 - \kappa, \varepsilon}} \| \rho^{2 - \sigma}
     \phi_{\varepsilon} \|^2_{L^{2, \varepsilon}} + \| \rho^2
     \psi_{\varepsilon} \|_{H^{1 - \kappa, \varepsilon}}^2 \]
  \begin{equation}
    \lesssim \lambda^3 Q_{\rho} (\mathbb{X}_{\varepsilon}) + \lambda \| \rho \phi_{\varepsilon}
    \|_{L^{4, \varepsilon}}^4 + \| \rho^2 \psi_{\varepsilon} \|_{L^{2,
    \varepsilon}}^2 + \| \rho^2 \nabla_{\varepsilon} \psi_{\varepsilon}
    \|_{L^{2, \varepsilon}}^2 . \label{eq:17}
  \end{equation}
  Therefore, according to Lemma~\ref{lemma:bounds-rhs1} we obtain that
  \[ \frac{1}{2} \partial_t \| \rho^2 \phi_{\varepsilon} \|_{L^{2,
     \varepsilon}}^2 +\lambda \| \rho \phi_{\varepsilon} \|_{L^{4, \varepsilon}}^4 +
     m^{2} \| \rho^2 \psi_{\varepsilon} \|_{L^{2, \varepsilon}}^2 + \| \rho^2
     \nabla_{\varepsilon} \psi_{\varepsilon} \|_{L^{2, \varepsilon}}^2  \]
  \[ \leqslant(\lambda^3 + \lambda^{(12 - \theta)
     / (2 + \theta)} | \log t|^{4 / (2 + \theta)} + \lambda^{7})Q_{\rho} (\mathbb{X}_{\varepsilon}) + \delta C
     (\lambda\| \rho \phi_{\varepsilon} \|_{L^{4, \varepsilon}}^4 + \| \rho^2
     \psi_{\varepsilon} \|_{L^{2, \varepsilon}}^2 + \| \rho^2
     \nabla_{\varepsilon} \psi_{\varepsilon} \|_{L^{2, \varepsilon}}^2). \]
Choosing $\delta > 0$ sufficiently small (depending  on $m^2$ and the implicit constant $C$ from Lemma~\ref{lem:grad}) allows to absorb the norms of
  $\phi_{\varepsilon}, \psi_{\varepsilon}$ from the right hand side into the
  left hand side and the claim follows.
\end{proof}

\begin{remark}\label{rem:neg-mass}
We point out that the requirement of a strictly positive mass $m^{2}>0$ is to some extent superfluous for our approach. To be more precise, if $m^{2}\leqslant 0$ then we may rewrite the mollified stochastic quantization equation as
$$
(\partial_{t}-\Delta_{\varepsilon} +1)\varphi_{\varepsilon} +\lambda\varphi_{\varepsilon}^{3}=\xi_{\varepsilon}+(1-m^{2})\varphi_{\varepsilon}
$$
and the same decomposition as above introduces an additional term on the right hand side of \eqref{eq:en12}. This can be controlled by
$$
|(1-m^{2})\langle \rho^{4}\phi_{\varepsilon},X_{\varepsilon}+Y_{\varepsilon}+\phi_{\varepsilon}\rangle|\lesssim C_{\delta,\lambda^{-1}}Q_{\rho}(\mathbb{X}_{\varepsilon})+\delta(\lambda\|\rho\phi_{\varepsilon}\|_{L^{4,\varepsilon}}^{4}+\|\rho^{2}\phi_{\varepsilon}\|_{H^{1-2\kappa,\varepsilon}}^{2}),
$$
where we write $C_{\delta,\lambda^{-1}}$ to stress that the constant is not uniform over small $\lambda$.
As a consequence, we obtain an analogue of Theorem \ref{th:energy-estimate} but the uniformity  for small $\lambda$ is not valid anymore.
\end{remark}

\begin{corollary}
  \label{cor:Lp}Let $\rho$ be a weight such that $\rho^{\iota} \in L^{4, 0}$
  for some $\iota \in (0, 1)$. Then
  for all $p \in [1, \infty)$ and    $\theta=\frac{1/2-4\kappa}{1-2\kappa}$
  \begin{equation}\label{eq:lp} \frac{1}{2 p} \partial_t \| \rho^2 \phi_{\varepsilon} \|_{L^{2,
     \varepsilon}}^{2 p} + \lambda \| \rho^2 \phi_{\varepsilon} \|_{L^{2,
     \varepsilon}}^{2 p + 2} \leqslant  \lambda [(\lambda^{2} + \lambda^{(10 - 2\theta)
     / (2 + \theta)} | \log t|^{4 / (2 + \theta)} + \lambda^{6}) Q_{\rho}
     (\mathbb{X}_{\varepsilon})]^{(p + 1) / 2} .
     \end{equation}
\end{corollary}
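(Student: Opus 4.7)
\smallskip
\noindent\textbf{Proof plan for Corollary \ref{cor:Lp}.}

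The plan is to upgrade the energy estimate of Theorem~\ref{th:energy-estimate} from $L^2$ to a general $L^{2p}$ bound by the usual trick of multiplying by $\|\rho^2\phi_\varepsilon\|_{L^{2,\varepsilon}}^{2(p-1)}$ and then combining a Cauchy--Schwarz lower bound for the quartic dissipation with a Young inequality on the right-hand side. First I would observe that by the chain rule $\frac{1}{2p}\partial_t\|\rho^2\phi_\varepsilon\|_{L^{2,\varepsilon}}^{2p}=\tfrac{1}{2}\|\rho^2\phi_\varepsilon\|_{L^{2,\varepsilon}}^{2(p-1)}\partial_t\|\rho^2\phi_\varepsilon\|_{L^{2,\varepsilon}}^{2}$, so multiplying \eqref{eq:d18} by $\|\rho^2\phi_\varepsilon\|_{L^{2,\varepsilon}}^{2(p-1)}$ produces $\frac{1}{2p}\partial_t\|\rho^2\phi_\varepsilon\|_{L^{2,\varepsilon}}^{2p}$ on the left, together with a nonnegative quartic term $\alpha\lambda\|\rho\phi_\varepsilon\|_{L^{4,\varepsilon}}^{4}\|\rho^2\phi_\varepsilon\|_{L^{2,\varepsilon}}^{2(p-1)}$ which I drop or bound from below as needed.

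Next I would use Cauchy--Schwarz on $\Lambda_\varepsilon$, writing $\rho^4=\rho^2\cdot\rho^2$ to obtain
\[
\|\rho^2\phi_\varepsilon\|_{L^{2,\varepsilon}}^{2}=\varepsilon^{d}\!\!\sum_{x\in\Lambda_\varepsilon}\!\rho^2(x)\,\rho^2(x)\phi_\varepsilon^2(x)\leqslant \|\rho^2\|_{L^{2,\varepsilon}}\|\rho\phi_\varepsilon\|_{L^{4,\varepsilon}}^{2},
\]
which is finite since $\rho^\iota\in L^{4,0}$ implies $\rho^2\in L^{2,0}$ and, by Lemma~\ref{lem:15} (or a direct Riemann-sum argument), uniformly in $\varepsilon$. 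Squaring gives $\|\rho\phi_\varepsilon\|_{L^{4,\varepsilon}}^{4}\gtrsim\|\rho^2\phi_\varepsilon\|_{L^{2,\varepsilon}}^{4}$, so multiplying by $\|\rho^2\phi_\varepsilon\|_{L^{2,\varepsilon}}^{2(p-1)}$ we get the desired term
\[
\lambda\|\rho\phi_\varepsilon\|_{L^{4,\varepsilon}}^{4}\|\rho^2\phi_\varepsilon\|_{L^{2,\varepsilon}}^{2(p-1)}\gtrsim \lambda\|\rho^2\phi_\varepsilon\|_{L^{2,\varepsilon}}^{2p+2}
\]
appearing on the left-hand side of the claimed bound.

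For the right-hand side, the crucial observation is that the prefactor in Theorem~\ref{th:energy-estimate} factors as $\lambda^3+\lambda^{(12-\theta)/(2+\theta)}|\log t|^{4/(2+\theta)}+\lambda^7=\lambda\cdot A(\lambda,t)$, where $A(\lambda,t):=\lambda^2+\lambda^{(10-2\theta)/(2+\theta)}|\log t|^{4/(2+\theta)}+\lambda^{6}$, simply by the algebraic identity $(12-\theta)/(2+\theta)-1=(10-2\theta)/(2+\theta)$. After multiplication by $\|\rho^2\phi_\varepsilon\|_{L^{2,\varepsilon}}^{2(p-1)}$, the right-hand side takes the form $\lambda\,A\,Q_\rho(\mathbb X_\varepsilon)\,\|\rho^2\phi_\varepsilon\|_{L^{2,\varepsilon}}^{2(p-1)}$. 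Applying Young's inequality $xy\leqslant \delta x^{q}+C_\delta y^{q'}$ with conjugate exponents $q=(p+1)/(p-1)$ and $q'=(p+1)/2$ to $x=\|\rho^2\phi_\varepsilon\|_{L^{2,\varepsilon}}^{2(p-1)}$ and $y=AQ_\rho$ yields
\[
A\,Q_\rho(\mathbb X_\varepsilon)\,\|\rho^2\phi_\varepsilon\|_{L^{2,\varepsilon}}^{2(p-1)}\leqslant \delta\,\|\rho^2\phi_\varepsilon\|_{L^{2,\varepsilon}}^{2(p+1)}+C_\delta\,[A\,Q_\rho(\mathbb X_\varepsilon)]^{(p+1)/2},
\]
which after multiplication by $\lambda$ matches exactly the form of the target right-hand side in \eqref{eq:lp}.

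Finally, choosing $\delta>0$ small enough (depending on the implicit constant in the Cauchy--Schwarz step) allows one to absorb $\delta\lambda\|\rho^2\phi_\varepsilon\|_{L^{2,\varepsilon}}^{2(p+1)}$ into the coercive quartic term on the left, and relabelling $Q_\rho$ (which is allowed to change from line to line) finishes the proof. There is no real obstacle here beyond keeping track of the algebra of the $\lambda$-exponents; the nontrivial analytic work has already been done in Theorem~\ref{th:energy-estimate}, and the Corollary is a purely algebraic consequence of it combined with Young and Cauchy--Schwarz inequalities.
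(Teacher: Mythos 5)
Your proposal is correct and follows essentially the same route as the paper: multiply \eqref{eq:d18} by $\|\rho^2\phi_\varepsilon\|_{L^{2,\varepsilon}}^{2(p-1)}$, bound the quartic dissipation from below by the $L^2$-norm, factor the prefactor as $\lambda\cdot A(\lambda,t)$ via the identity $(12-\theta)/(2+\theta)-1=(10-2\theta)/(2+\theta)$, apply Young's inequality with conjugate exponents $(p+1)/(p-1)$ and $(p+1)/2$, and absorb the small term. The only detail worth flagging is that for $p=1$ the Young step degenerates ($q=\infty$), but there $\|\rho^2\phi_\varepsilon\|_{L^{2,\varepsilon}}^{2(p-1)}=1$ and the bound holds directly without it, so nothing is lost.
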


\begin{proof}
  Based on {\eqref{eq:d18}} we obtain
  \[ \frac{1}{2 p} \partial_t \| \rho^2 \phi_{\varepsilon} \|_{L^{2,
     \varepsilon}}^{2 p} + \lambda \| \rho^2 \phi_{\varepsilon} \|_{L^{2,
     \varepsilon}}^{2 (p - 1)} \| \rho \phi_{\varepsilon} \|_{L^{4,
     \varepsilon}}^4 
     \]
     \[
     \leqslant (\lambda^3 + \lambda^{(12 - \theta)
     / (2 + \theta)} | \log t|^{4 / (2 + \theta)} + \lambda^{7}) \| \rho^2 \phi_{\varepsilon}
     \|_{L^{2, \varepsilon}}^{2 (p - 1)} Q_{\rho} (\mathbb{X}_{\varepsilon}) .
  \]
  The $L^4$-norm on the left hand side can be estimated from below by the
  $L^2$-norm, whereas on the right hand side we use Young's inequality to
  deduce
  \[ \frac{1}{2 p} \partial_t \| \rho^2 \phi_{\varepsilon} \|_{L^{2,
     \varepsilon}}^{2 p} + \lambda \| \rho^2 \phi_{\varepsilon} \|_{L^{2,
     \varepsilon}}^{2 p + 2}
     \]
     \[
      \leqslant \lambda [(\lambda^{2} + \lambda^{(10 - 2\theta)
     / (2 + \theta)} | \log t|^{4 / (2 + \theta)} + \lambda^{6}) Q_{\rho}
     (\mathbb{X}_{\varepsilon})]^{(p + 1) / 2} + \delta \lambda \| \rho^2
     \phi_{\varepsilon} \|_{L^{2, \varepsilon}}^{2 p + 2} .\]
Hence we may absorb the second term from the right hand side into the left hand
  side.
\end{proof}

\subsection{Tightness of the invariant measures}

\label{ss:tight}Recall that $\varphi_{M, \varepsilon}$ is a stationary
solution to {\eqref{eq:moll}} having at time $t \geqslant 0$ law given by the
Gibbs measure $\nu_{M, \varepsilon}$. Moreover, we have the decomposition
$\varphi_{M, \varepsilon} = X_{M, \varepsilon} + Y_{M, \varepsilon} + \phi_{M,
\varepsilon}$, where $X_{M, \varepsilon}$ is stationary as well. By our construction, all equations are solved on a common
probability space, say $(\Omega, \mathcal{F}, \mathbb{P})$, and we denote by
$\mathbb{E}$ the corresponding expected value. In addition, we assume that the processes $\varphi_{M,\varepsilon}$ and $X_{M,\varepsilon}$ are  jointly stationary. This could be achieved for instance by considering a solution to the coupled SDE for $(\varphi_{M,\varepsilon},X_{M,\varepsilon})$ starting from the product of the corresponding marginal invariant measures, and applying Krylov--Bogoliubov's argument.

\begin{theorem}
  \label{thm:tight}Let $\rho$ be a weight such that $\rho^{\iota} \in L^{4,
  0}$ for some $\iota \in (0, 1)$. Then for every $p \in [1, \infty)$
  \[ \sup_{\varepsilon \in \mathcal{A}, M > 0} (\mathbb{E} \| \varphi_{M,
     \varepsilon} (0) - X_{M, \varepsilon} (0) \|_{H^{1 / 2 - 2 \kappa,
     \varepsilon} (\rho^2)}^2)^{1/2}\lesssim {\lambda} +  \lambda^{7/2},\]
  \[    \sup_{\varepsilon \in \mathcal{A}, M > 0}   (\mathbb{E} \| \varphi_{M,
     \varepsilon} (0) - X_{M, \varepsilon} (0) \|_{L^{2, \varepsilon}
     (\rho^2)}^{2 p})^{1/2p}\lesssim {\lambda^{1/2}} +  {\lambda^{3/2}}.\]
\end{theorem}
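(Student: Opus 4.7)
Since $\phi_{M,\varepsilon}$ and $Y_{M,\varepsilon}$ introduced in Section~\ref{s:estim} are not stationary, one cannot apply Theorem~\ref{th:energy-estimate} or Corollary~\ref{cor:Lp} pointwise at $t=0$. The plan is to go back to the stationary decomposition $\varphi_{M,\varepsilon}=X_{M,\varepsilon}-\lambda\tthreeone{X}_{M,\varepsilon}+\zeta_{M,\varepsilon}$, where $\zeta_{M,\varepsilon}$ is stationary as a sum of stationary processes, and to convert the available energy bounds on $\phi_{M,\varepsilon}$ into pointwise bounds on $\zeta_{M,\varepsilon}(0)$ via time averaging. Two identities are crucial: the initial condition $Y_{M,\varepsilon}(0)=-\lambda\tthreeone{X}_{M,\varepsilon}(0)$ forces $\phi_{M,\varepsilon}(0)=\zeta_{M,\varepsilon}(0)$, and for $t\geq 0$ one has $\zeta_{M,\varepsilon}(t)=\phi_{M,\varepsilon}(t)+W_{M,\varepsilon}(t)$ where $W_{M,\varepsilon}\assign Y_{M,\varepsilon}+\lambda\tthreeone{X}_{M,\varepsilon}$ solves $\LL_{\varepsilon}W_{M,\varepsilon}=-3\lambda(\UU^{\varepsilon}_{>}\llbracket X^{2}_{M,\varepsilon}\rrbracket)\succ Y_{M,\varepsilon}$ with $W_{M,\varepsilon}(0)=0$; by Schauder estimates, Lemma~\ref{lem:Y1} and the choice \eqref{eq:U11} of the localizer, $W_{M,\varepsilon}$ is uniformly controlled by $\lambda$ times a polynomial in $\|\mathbb{X}_{M,\varepsilon}\|$.

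For the $H^{1/2-2\kappa,\varepsilon}(\rho^{2})$ bound I would integrate Theorem~\ref{th:energy-estimate} on $[0,T]$, discard the non-negative terminal contribution $\|\rho^{2}\phi_{M,\varepsilon}(T)\|_{L^{2,\varepsilon}}^{2}$, and use the embedding $H^{1-2\kappa,\varepsilon}\hookrightarrow H^{1/2-2\kappa,\varepsilon}$ to obtain
$$\int_{0}^{T}\mathbb{E}\|\rho^{2}\phi_{M,\varepsilon}\|_{H^{1/2-2\kappa,\varepsilon}}^{2}\,dt\lesssim \mathbb{E}\|\rho^{2}\phi_{M,\varepsilon}(0)\|_{L^{2,\varepsilon}}^{2}+\mathbb{E}\int_{0}^{T}C_{\lambda,t}\,Q_{\rho}(\mathbb{X}_{M,\varepsilon})\,dt.$$
Stationarity of $\zeta$ gives $T\,\mathbb{E}\|\rho^{2}\zeta_{M,\varepsilon}(0)\|_{H^{1/2-2\kappa,\varepsilon}}^{2}=\int_{0}^{T}\mathbb{E}\|\rho^{2}\zeta_{M,\varepsilon}(t)\|_{H^{1/2-2\kappa,\varepsilon}}^{2}\,dt$; combined with $\zeta=\phi+W$, the identity $\phi(0)=\zeta(0)$, the uniform bound on $W_{M,\varepsilon}$, and $\|\zeta(0)\|_{L^{2}(\rho^{2})}\leq \|\zeta(0)\|_{H^{1/2-2\kappa}(\rho^{2})}$, this leads to an inequality of the form $(T-C)\,\mathbb{E}\|\rho^{2}\zeta_{M,\varepsilon}(0)\|_{H^{1/2-2\kappa,\varepsilon}}^{2}\lesssim \mathbb{E}\int_{0}^{T}C_{\lambda,t}Q_{\rho}\,dt+T\lambda^{2}\,\mathbb{E}\,Q_{\rho}(\mathbb{X}_{M,\varepsilon})$. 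Picking $T>2C$ and taking square roots, together with the $O(\lambda)$ contribution from the $-\lambda\tthreeone{X}_{M,\varepsilon}$ term, produces the claimed $\lambda+\lambda^{7/2}$ scaling, because each summand of $\sqrt{C_{\lambda,t}}$ (including the intermediate power $\lambda^{(12-\theta)/(2(2+\theta))}$) is dominated by $\lambda+\lambda^{7/2}$ for $\lambda\in(0,1]$.

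For the $L^{2,\varepsilon}(\rho^{2})$ $2p$-th moment bound I would apply the same scheme with Corollary~\ref{cor:Lp} in place of Theorem~\ref{th:energy-estimate}. Setting $x\assign \mathbb{E}\|\rho^{2}\zeta_{M,\varepsilon}(0)\|_{L^{2,\varepsilon}}^{2p+2}$ and combining time integration, stationarity of $\zeta$ and $\|\zeta\|^{2p+2}\lesssim \|\phi\|^{2p+2}+\|W\|^{2p+2}$, one gets
$$\lambda T\,x\lesssim \mathbb{E}\|\rho^{2}\zeta_{M,\varepsilon}(0)\|_{L^{2,\varepsilon}}^{2p}+\lambda\int_{0}^{T}\mathbb{E}[\tilde C_{\lambda,t}\,Q_{\rho}(\mathbb{X}_{M,\varepsilon})]^{(p+1)/2}\,dt+\lambda T\lambda^{2p+2}\,\mathbb{E}\,Q_{\rho}(\mathbb{X}_{M,\varepsilon}),$$
where $\tilde C_{\lambda,t}=\lambda^{2}+\lambda^{(10-2\theta)/(2+\theta)}|\log t|^{4/(2+\theta)}+\lambda^{6}$. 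Jensen's inequality gives $\mathbb{E}\|\zeta(0)\|_{L^{2,\varepsilon}}^{2p}\leq x^{p/(p+1)}$, and Young in the form $a^{p/(p+1)}\leq \tfrac{1}{2}\lambda T\,a+C_{p}(\lambda T)^{-p}$ absorbs this term into the left-hand side. The optimal choice $T=\lambda^{-2}$ makes the Young residual $(\lambda T)^{-(p+1)}=\lambda^{p+1}$ balance the stochastic average $T^{-1}\int_{0}^{T}[\tilde C_{\lambda,t}]^{(p+1)/2}\,dt\lesssim \lambda^{p+1}+\lambda^{3(p+1)}$ (up to harmless $|\log\lambda|$ factors), whence $x\lesssim \lambda^{p+1}+\lambda^{3(p+1)}$. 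A final Jensen step converts this $(2p+2)$-th moment bound into the desired $2p$-th moment bound $(\mathbb{E}\|\rho^{2}\zeta_{M,\varepsilon}(0)\|_{L^{2,\varepsilon}}^{2p})^{1/(2p)}\lesssim \lambda^{1/2}+\lambda^{3/2}$, to which one adds the $O(\lambda)$ contribution from $-\lambda\tthreeone{X}_{M,\varepsilon}$.

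The main obstacle is this disconnect between the non-stationary $\phi_{M,\varepsilon}$, for which the energy estimates close, and the stationary $\zeta_{M,\varepsilon}$, whose law at $t=0$ is the target; the time-averaging argument together with $\phi(0)=\zeta(0)$ and the uniformly $O(\lambda)$-small perturbation $W_{M,\varepsilon}$ is the bridging mechanism. A secondary subtlety is the $\lambda$-dependent choice of the averaging window $T\sim \lambda^{-2}$ in the $L^{2p}$ argument, which is precisely what is needed to extract the correct $\lambda^{1/2}+\lambda^{3/2}$ scaling from the balance between the dissipation coefficient $\lambda T$ and the Young correction $(\lambda T)^{-p}$ produced by converting the $L^{2p+2}$-dissipation in Corollary~\ref{cor:Lp} back to the $L^{2p}$-initial-data term.
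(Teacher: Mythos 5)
Your proposal captures the essential mechanism used in the paper: since $\phi_{M,\varepsilon}$ and $Y_{M,\varepsilon}$ are not stationary, the energy bounds on $\phi$ are converted into pointwise-in-time bounds via time averaging of the \emph{stationary} component, then the loop is closed by relating $\phi(0)$ back to the quantity one is averaging, and absorbing via Young's inequality. This is precisely the paper's structure. There are, however, a few differences worth noting. First, the paper averages $\varphi-X = Y+\phi$ directly and closes via $\phi(0)=(\varphi(0)-X(0))-Y(0)$, applying Corollary~\ref{cor:Lp} with $p-1$ to get $L^{2p}$-dissipation and then Young on $\mathbb{E}\|\cdot\|^{2(p-1)}$; you instead keep the cleaner split $\zeta = \phi + W$ with $W := Y + \lambda\tthreeone{X}_{M,\varepsilon}$ and $W(0)=0$, exploit $\phi(0)=\zeta(0)$ exactly, use Corollary~\ref{cor:Lp} at level $p$ to get $L^{2p+2}$-dissipation, and then Jensen plus Young to return to $L^{2p}$. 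Both close; your variant has the pleasant feature that the averaging window $T\sim\lambda^{-2}$ is $p$-independent, whereas the paper takes $\tau = \max(1,\lambda^{-2p})$.

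One imprecision: you assert $W$ is controlled by $\lambda$ times a polynomial in $\|\mathbb{X}_{M,\varepsilon}\|$. In fact the correct scaling is $\lambda^{2}$ (this is the $\eta_{M,\varepsilon}$ in the paper's notation, for which $\mathbb{E}\|\eta_{M,\varepsilon}\|^{2p}_{C_T\CC^{1-\kappa,\varepsilon}(\rho^{\sigma})}\lesssim\lambda^{4p}$): the Schauder estimate gives $\|W\|\lesssim\lambda\|\UU^{\varepsilon}_{>}\llbracket X^{2}_{M,\varepsilon}\rrbracket\|\cdot\|Y_{M,\varepsilon}\|$, and by Lemma~\ref{lem:Y1} the factor $\|Y_{M,\varepsilon}\|$ itself already carries a $\lambda$. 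The localizer choice \eqref{eq:U11} is designed to make the $Y$-equation a contraction and gives at best a uniform bound on $\lambda\|\UU^{\varepsilon}_{>}\llbracket X^{2}_{M,\varepsilon}\rrbracket\|$, not smallness in $\lambda$. Since $\lambda^{2}\lesssim\lambda$ for $\lambda\le 1$, your argument in fact produces a slightly stronger (not weaker) bound in the small-coupling regime, so the claimed scaling $\lambda^{1/2}+\lambda^{3/2}$ follows. For $\lambda>1$ the $W$-contribution to the $(2p+2)$-th moment is $\lambda^{4(p+1)}$ rather than $\lambda^{2(p+1)}$, so your accounting gives $\lambda^{2}$ rather than $\lambda^{3/2}$ in that regime; the paper sidesteps this because it carries $Y$ (which scales like $\lambda$) rather than $W$ through the averaging. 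This is a bookkeeping discrepancy at large $\lambda$ rather than a structural flaw; I would simply correct the $O(\lambda)$ claim to $O(\lambda^{2})$ and, if uniformity for all $\lambda>0$ matters, revert to the paper's split $\varphi-X=Y+\phi$ in the $L^{2p}$ argument.
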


\begin{proof}
  Let us  show the first claim. Due to stationarity of $\varphi_{M,
  \varepsilon} - X_{M, \varepsilon} = Y_{M, \varepsilon} + \phi_{M,
  \varepsilon}$ we obtain
  \[ \mathbb{E} \| \rho^2 (\varphi_{M, \varepsilon} (0) - X_{M, \varepsilon}
     (0)) \|_{H^{1 / 2 - 2 \kappa, \varepsilon}}^2 = \frac{1}{\tau}
     \int_0^{\tau} \mathbb{E} \| \rho^2 (\varphi_{M, \varepsilon} (s) - X_{M,
     \varepsilon} (s)) \|_{H^{1 / 2 - 2 \kappa, \varepsilon}}^2 \mathd s \]
  \[ = \frac{1}{\tau} \int_0^{\tau} \mathbb{E} \| \rho^2 (\phi_{M,
     \varepsilon} (s) + Y_{M, \varepsilon} (s)) \|_{H^{1 / 2 - 2 \kappa,
     \varepsilon}}^2 \mathd s \]
  \[ \lesssim \frac{1}{\tau} \int_0^{\tau} \mathbb{E} \| \rho^2 \phi_{M,
     \varepsilon} (s) \|_{H^{1 / 2 - 2 \kappa, \varepsilon}}^2 \mathd s +
     \frac{1}{\tau} \int_0^{\tau} \mathbb{E} \| \rho^2 Y_{M, \varepsilon} (s)
     \|_{H^{1 / 2 - 2 \kappa, \varepsilon}}^2 \mathd s. \]
  In order to estimate the  right hand side, we employ Theorem \ref{th:energy-estimate} together with Lemma~\ref{lem:Y1} to deduce
  \[ \mathbb{E} \| \rho^2 (\varphi_{M, \varepsilon} (0) - X_{M, \varepsilon}
     (0)) \|_{H^{1 / 2 - 2 \kappa, \varepsilon}}^2 \]
  \[ \lesssim C_{\tau}(\lambda^2 +  \lambda^{7})\mathbb{E}Q_{\rho}
     (\mathbb{X}_{M, \varepsilon}) + \frac{1}{2\tau} \mathbb{E} \| \rho^2 \phi_{M,
     \varepsilon} (0) \|_{L^{2, \varepsilon}}^2 +\mathbb{E}\|\rho^{\sigma}Y_{M,\varepsilon}\|_{C_{T}\CC^{1/2-\kappa,\varepsilon}}^{2}\]
  \[ \leqslant C_{\tau} (\lambda^2 +  \lambda^{7})\mathbb{E}Q_{\rho} (\mathbb{X}_{M, \varepsilon}) +\frac{C}{\tau}
     \mathbb{E} \| \rho^2 (\varphi_{M, \varepsilon} (0) - X_{M, \varepsilon}
     (0)) \|_{L^{2, \varepsilon}}^2 + \frac{C}{\tau}\mathbb{E} \| \rho^2 Y_{M,\varepsilon}
     (0) \|_{L^{2, \varepsilon}}^2 \]
  \[ \leqslant C_{\tau}(\lambda^2 +  \lambda^{7}) \mathbb{E}Q_{\rho} (\mathbb{X}_{M, \varepsilon}) +
     \frac{C}{\tau}\mathbb{E} \| \rho^2 (\varphi_{M, \varepsilon} (0) - X_{M, \varepsilon}
     (0)) \|_{L^{2, \varepsilon}}^2 .\]
  Finally, taking $\tau > 0$ large enough, we may absorb the second term from
  the right hand side into the left hand side to deduce
  \[ \mathbb{E} \| \rho^2 (\varphi_{M, \varepsilon} (0) - X_{M, \varepsilon}
     (0)) \|_{H^{1 / 2 - 2 \kappa, \varepsilon}}^2 \leqslant C_{\tau} ({\lambda^2} +  \lambda^{7})
     \mathbb{E}Q_{\rho} (\mathbb{X}_{M, \varepsilon}) . \]
  Observing that the right hand side is bounded uniformly in $M,\varepsilon$, completes the proof of the first claim.
  
  Now, we show the second claim for $p \in [2, \infty)$. The case $p \in [1,
  2)$ then follows easily from the bound for $p=2$. Using stationarity as above we have
  \[ \mathbb{E} \| \rho^2 (\varphi_{M, \varepsilon} (0) - X_{M, \varepsilon}
     (0)) \|_{L^{2, \varepsilon}}^{2 p} = \frac{1}{\tau} \int_0^{\tau}
     \mathbb{E} \| \rho^2 (\phi_{M, \varepsilon} (s) + Y_{M, \varepsilon} (s))
     \|_{L^{2, \varepsilon}}^{2 p} \mathd s \]
  \[ \  \]
  \begin{equation}
    \lesssim \frac{1}{\tau} \int_0^{\tau} \mathbb{E} \| \rho^2 \phi_{M,
    \varepsilon} (s) \|_{L^{2, \varepsilon}}^{2 p} \mathd s + \frac{1}{\tau}
    \int_0^{\tau} \mathbb{E} \| \rho^2 Y_{M, \varepsilon} (s) \|_{L^{2,
    \varepsilon}}^{2 p} \mathd s. \label{eq:d21}
  \end{equation}
  Due to Corollary~\ref{cor:Lp} applied to $p - 1$ and the fact that for any $\sigma>0$ and  $\tau\geqslant 1$
  $$\int_{0}^{\tau}| \log s|^{2p/(2+\theta)}\mathrm{d} s\leqslant C_{p,\sigma}\tau^{1+\sigma},$$
  we deduce
  \[ \begin{aligned}
  \alpha \int_0^{\tau} \mathbb{E} \| \rho^2 \phi_{M, \varepsilon} (s)
     \|_{L^{2, \varepsilon}}^{2 p} \mathd s & \leqslant C_{p,\sigma} [\tau(\lambda^{2} + \lambda^{6})^{p/2}+\tau^{1+\sigma}\lambda^{p(5-\theta)/(2+\theta)}] \mathbb{E}
     [Q_{\rho} (\mathbb{X}_{M,\varepsilon})] \\
     &\qquad + \frac{\lambda^{-1}}{2 (p - 1)}
     \mathbb{E} \| \rho^2 \phi_{M, \varepsilon} (0) \|_{L^{2, \varepsilon}}^{2
     (p - 1)} \\
  & \leqslant C_{p,\sigma}[\tau(\lambda^{2} + \lambda^{6})^{p/2}+\tau^{1+\sigma}\lambda^{p(5-\theta)/(2+\theta)}]  \mathbb{E} [Q_{\rho} (\mathbb{X}_{M,\varepsilon})] 
     \\ & \qquad + C_p  \lambda^{-1}\mathbb{E} \| \rho^2 (\varphi_{M, \varepsilon} (0) - X_{M,
     \varepsilon} (0)) \|_{L^{2, \varepsilon}}^{2 (p - 1)} 
     \\ & \qquad + C_p \lambda^{-1}\mathbb{E} \|
     \rho^2 Y_{M, \varepsilon} (0) \|_{L^{2, \varepsilon}}^{2 (p - 1)} . \end{aligned}\]
  Plugging this back into {\eqref{eq:d21}} and using Young's inequality we
  obtain
  \[ \mathbb{E} \| \rho^2 (\varphi_{M, \varepsilon} (0) - X_{M, \varepsilon}
     (0)) \|_{L^{2, \varepsilon}}^{2 p} \leqslant \frac{C_{p,\sigma}}{\alpha}[(\lambda^{2} + \lambda^{6})^{p/2}+\tau^{\sigma}\lambda^{p(5-\theta)/(2+\theta)}] \mathbb{E}
     [Q_{\rho} (\mathbb{X}_{M,\varepsilon})]  \]
  \[ + \delta \frac{C_p}{\alpha
     \tau}  \mathbb{E} \| \rho^2 (\varphi_{M, \varepsilon} (0) -
     X_{M, \varepsilon} (0)) \|_{L^{2, \varepsilon}}^{2 p} + \frac{1}{\lambda^p \tau}C_{\delta, p} {+\frac{C_{p}\lambda^{2p}}{\alpha\tau}\mathbb{E}[Q_{\rho}(\mathbb{X}_{M,\varepsilon})]}. \]
Taking $\tau = {\max}(1,\lambda^{{-2p}})$ leads to
 \[ \mathbb{E} \| \rho^2 (\varphi_{M, \varepsilon} (0) - X_{M, \varepsilon}
     (0)) \|_{L^{2, \varepsilon}}^{2 p} \leqslant \frac{C_{p,\sigma}}{\alpha}[(\lambda^{2}+ \lambda^{6})^{p/2}+\tau^{\sigma}\lambda^{p(5-\theta)/(2+\theta)}] \mathbb{E}
     [Q_{\rho} (\mathbb{X}_{M,\varepsilon})]  \]
  \[ + \delta C_{p,\alpha}  \mathbb{E} \| \rho^2 (\varphi_{M, \varepsilon} (0) -
     X_{M, \varepsilon} (0)) \|_{L^{2, \varepsilon}}^{2 p} + {\lambda^{p}}C_{\delta, p} {+C_{p,\alpha}\lambda^{2p}\mathbb{E}[Q_{\rho}(\mathbb{X}_{M,\varepsilon})]} \]
and choosing $\delta>0$ small enough, we may absorb the second term on the right hand side into the left hand side and the claim follows
\end{proof}

The above result directly implies the desired tightness of the approximate
Gibbs measures $\nu_{M, \varepsilon}$. To formulate this precisely we make use
of the extension operators $\mathcal{E}^{\varepsilon}$ for distributions on
$\Lambda_{\varepsilon}$ constructed in Section~\ref{s:ext}.
We recall that on
the approximate level the stationary process $\varphi_{M, \varepsilon}$ admits the decomposition
$ \varphi_{M, \varepsilon} = X_{M, \varepsilon} + Y_{M, \varepsilon} +
   \phi_{M, \varepsilon}, $
where $X_{M,\varepsilon}$ is stationary and $Y_{M,\varepsilon}$ is given by \eqref{eq:YY} with $X_{M, \varepsilon}^{\!\resizebox{0.6em}{!}{
\begin{tikzpicture}
\pgfpathmoveto{\pgfqpoint{0cm}{-0.035cm}}
\pgfpathlineto{\pgfqpoint{1.376cm}{-0.035cm}}
\pgfpathlineto{\pgfqpoint{1.376cm}{1.552cm}}
\pgfpathlineto{\pgfqpoint{0cm}{1.552cm}}
\pgfpathclose
\pgfusepath{clip}
\begin{pgfscope}
\begin{pgfscope}
\pgfpathmoveto{\pgfqpoint{0cm}{-0.035cm}}
\pgfpathlineto{\pgfqpoint{1.376cm}{-0.035cm}}
\pgfpathlineto{\pgfqpoint{1.376cm}{1.552cm}}
\pgfpathlineto{\pgfqpoint{0cm}{1.552cm}}
\pgfpathclose
\pgfusepath{clip}
\begin{pgfscope}
\begin{pgfscope}
\pgfsetdash{}{0cm}
\pgfsetlinewidth{0.818mm}
\pgfsetroundcap
\pgfsetroundjoin
\pgfsetmiterlimit{7.0}
\definecolor{eps2pgf_color}{gray}{0}\pgfsetstrokecolor{eps2pgf_color}\pgfsetfillcolor{eps2pgf_color}
\pgfpathmoveto{\pgfqpoint{0.117cm}{1.421cm}}
\pgfpathlineto{\pgfqpoint{0.682cm}{0.671cm}}
\pgfpathlineto{\pgfqpoint{1.246cm}{1.421cm}}
\pgfusepath{stroke}
\end{pgfscope}
\definecolor{eps2pgf_color}{gray}{0}\pgfsetstrokecolor{eps2pgf_color}\pgfsetfillcolor{eps2pgf_color}
\pgfpathmoveto{\pgfqpoint{0.273cm}{1.395cm}}
\pgfpathcurveto{\pgfqpoint{0.273cm}{1.432cm}}{\pgfqpoint{0.259cm}{1.467cm}}{\pgfqpoint{0.233cm}{1.492cm}}
\pgfpathcurveto{\pgfqpoint{0.207cm}{1.518cm}}{\pgfqpoint{0.173cm}{1.532cm}}{\pgfqpoint{0.137cm}{1.532cm}}
\pgfpathcurveto{\pgfqpoint{0.1cm}{1.532cm}}{\pgfqpoint{0.066cm}{1.518cm}}{\pgfqpoint{0.04cm}{1.492cm}}
\pgfpathcurveto{\pgfqpoint{0.014cm}{1.467cm}}{\pgfqpoint{0cm}{1.432cm}}{\pgfqpoint{0cm}{1.395cm}}
\pgfpathcurveto{\pgfqpoint{0cm}{1.359cm}}{\pgfqpoint{0.014cm}{1.324cm}}{\pgfqpoint{0.04cm}{1.299cm}}
\pgfpathcurveto{\pgfqpoint{0.066cm}{1.273cm}}{\pgfqpoint{0.1cm}{1.258cm}}{\pgfqpoint{0.137cm}{1.258cm}}
\pgfpathcurveto{\pgfqpoint{0.173cm}{1.258cm}}{\pgfqpoint{0.207cm}{1.273cm}}{\pgfqpoint{0.233cm}{1.299cm}}
\pgfpathcurveto{\pgfqpoint{0.259cm}{1.324cm}}{\pgfqpoint{0.273cm}{1.359cm}}{\pgfqpoint{0.273cm}{1.395cm}}
\pgfusepath{fill}
\begin{pgfscope}
\pgfsetdash{}{0cm}
\pgfsetlinewidth{0.818mm}
\pgfsetmiterlimit{7.0}
\pgfpathmoveto{\pgfqpoint{0.682cm}{0.671cm}}
\pgfpathlineto{\pgfqpoint{0.679cm}{1.418cm}}
\pgfusepath{stroke}
\end{pgfscope}
\pgfpathmoveto{\pgfqpoint{0.815cm}{1.399cm}}
\pgfpathcurveto{\pgfqpoint{0.815cm}{1.435cm}}{\pgfqpoint{0.801cm}{1.47cm}}{\pgfqpoint{0.775cm}{1.496cm}}
\pgfpathcurveto{\pgfqpoint{0.75cm}{1.521cm}}{\pgfqpoint{0.715cm}{1.536cm}}{\pgfqpoint{0.679cm}{1.536cm}}
\pgfpathcurveto{\pgfqpoint{0.643cm}{1.536cm}}{\pgfqpoint{0.608cm}{1.521cm}}{\pgfqpoint{0.582cm}{1.496cm}}
\pgfpathcurveto{\pgfqpoint{0.557cm}{1.47cm}}{\pgfqpoint{0.542cm}{1.435cm}}{\pgfqpoint{0.542cm}{1.399cm}}
\pgfpathcurveto{\pgfqpoint{0.542cm}{1.363cm}}{\pgfqpoint{0.557cm}{1.328cm}}{\pgfqpoint{0.582cm}{1.302cm}}
\pgfpathcurveto{\pgfqpoint{0.608cm}{1.276cm}}{\pgfqpoint{0.643cm}{1.262cm}}{\pgfqpoint{0.679cm}{1.262cm}}
\pgfpathcurveto{\pgfqpoint{0.715cm}{1.262cm}}{\pgfqpoint{0.75cm}{1.276cm}}{\pgfqpoint{0.775cm}{1.302cm}}
\pgfpathcurveto{\pgfqpoint{0.801cm}{1.328cm}}{\pgfqpoint{0.815cm}{1.363cm}}{\pgfqpoint{0.815cm}{1.399cm}}
\pgfusepath{fill}
\pgfpathmoveto{\pgfqpoint{1.345cm}{1.371cm}}
\pgfpathcurveto{\pgfqpoint{1.345cm}{1.408cm}}{\pgfqpoint{1.331cm}{1.442cm}}{\pgfqpoint{1.305cm}{1.468cm}}
\pgfpathcurveto{\pgfqpoint{1.28cm}{1.494cm}}{\pgfqpoint{1.245cm}{1.508cm}}{\pgfqpoint{1.209cm}{1.508cm}}
\pgfpathcurveto{\pgfqpoint{1.172cm}{1.508cm}}{\pgfqpoint{1.138cm}{1.494cm}}{\pgfqpoint{1.112cm}{1.468cm}}
\pgfpathcurveto{\pgfqpoint{1.087cm}{1.442cm}}{\pgfqpoint{1.072cm}{1.408cm}}{\pgfqpoint{1.072cm}{1.371cm}}
\pgfpathcurveto{\pgfqpoint{1.072cm}{1.335cm}}{\pgfqpoint{1.087cm}{1.3cm}}{\pgfqpoint{1.112cm}{1.274cm}}
\pgfpathcurveto{\pgfqpoint{1.138cm}{1.249cm}}{\pgfqpoint{1.172cm}{1.234cm}}{\pgfqpoint{1.209cm}{1.234cm}}
\pgfpathcurveto{\pgfqpoint{1.245cm}{1.234cm}}{\pgfqpoint{1.28cm}{1.249cm}}{\pgfqpoint{1.305cm}{1.274cm}}
\pgfpathcurveto{\pgfqpoint{1.331cm}{1.3cm}}{\pgfqpoint{1.345cm}{1.335cm}}{\pgfqpoint{1.345cm}{1.371cm}}
\pgfusepath{fill}
\begin{pgfscope}
\pgfsetdash{}{0cm}
\pgfsetlinewidth{0.818mm}
\pgfsetroundcap
\pgfsetmiterlimit{4.0}
\pgfpathmoveto{\pgfqpoint{0.682cm}{0.671cm}}
\pgfpathlineto{\pgfqpoint{0.682cm}{0.042cm}}
\pgfusepath{stroke}
\end{pgfscope}
\end{pgfscope}
\end{pgfscope}
\end{pgfscope}
\end{tikzpicture}}}$ being also stationary.
Accordingly, letting
\[ \zeta_{M, \varepsilon} \assign - \LL_{\varepsilon}^{- 1} \left[ 3\lambda \left(
   \UU^{\varepsilon}_{>} \llbracket X_{M, \varepsilon}^2 \rrbracket \right)
   \succ Y_{M, \varepsilon} \right] + \phi_{M, \varepsilon} \backassign
   \eta_{M, \varepsilon} + \phi_{M, \varepsilon} \]
we obtain $\varphi_{M, \varepsilon} = X_{M, \varepsilon} - \lambda X_{M,
\varepsilon}^{\!\resizebox{0.6em}{!}{
\begin{tikzpicture}
\pgfpathmoveto{\pgfqpoint{0cm}{-0.035cm}}
\pgfpathlineto{\pgfqpoint{1.376cm}{-0.035cm}}
\pgfpathlineto{\pgfqpoint{1.376cm}{1.552cm}}
\pgfpathlineto{\pgfqpoint{0cm}{1.552cm}}
\pgfpathclose
\pgfusepath{clip}
\begin{pgfscope}
\begin{pgfscope}
\pgfpathmoveto{\pgfqpoint{0cm}{-0.035cm}}
\pgfpathlineto{\pgfqpoint{1.376cm}{-0.035cm}}
\pgfpathlineto{\pgfqpoint{1.376cm}{1.552cm}}
\pgfpathlineto{\pgfqpoint{0cm}{1.552cm}}
\pgfpathclose
\pgfusepath{clip}
\begin{pgfscope}
\begin{pgfscope}
\pgfsetdash{}{0cm}
\pgfsetlinewidth{0.818mm}
\pgfsetroundcap
\pgfsetroundjoin
\pgfsetmiterlimit{7.0}
\definecolor{eps2pgf_color}{gray}{0}\pgfsetstrokecolor{eps2pgf_color}\pgfsetfillcolor{eps2pgf_color}
\pgfpathmoveto{\pgfqpoint{0.117cm}{1.421cm}}
\pgfpathlineto{\pgfqpoint{0.682cm}{0.671cm}}
\pgfpathlineto{\pgfqpoint{1.246cm}{1.421cm}}
\pgfusepath{stroke}
\end{pgfscope}
\definecolor{eps2pgf_color}{gray}{0}\pgfsetstrokecolor{eps2pgf_color}\pgfsetfillcolor{eps2pgf_color}
\pgfpathmoveto{\pgfqpoint{0.273cm}{1.395cm}}
\pgfpathcurveto{\pgfqpoint{0.273cm}{1.432cm}}{\pgfqpoint{0.259cm}{1.467cm}}{\pgfqpoint{0.233cm}{1.492cm}}
\pgfpathcurveto{\pgfqpoint{0.207cm}{1.518cm}}{\pgfqpoint{0.173cm}{1.532cm}}{\pgfqpoint{0.137cm}{1.532cm}}
\pgfpathcurveto{\pgfqpoint{0.1cm}{1.532cm}}{\pgfqpoint{0.066cm}{1.518cm}}{\pgfqpoint{0.04cm}{1.492cm}}
\pgfpathcurveto{\pgfqpoint{0.014cm}{1.467cm}}{\pgfqpoint{0cm}{1.432cm}}{\pgfqpoint{0cm}{1.395cm}}
\pgfpathcurveto{\pgfqpoint{0cm}{1.359cm}}{\pgfqpoint{0.014cm}{1.324cm}}{\pgfqpoint{0.04cm}{1.299cm}}
\pgfpathcurveto{\pgfqpoint{0.066cm}{1.273cm}}{\pgfqpoint{0.1cm}{1.258cm}}{\pgfqpoint{0.137cm}{1.258cm}}
\pgfpathcurveto{\pgfqpoint{0.173cm}{1.258cm}}{\pgfqpoint{0.207cm}{1.273cm}}{\pgfqpoint{0.233cm}{1.299cm}}
\pgfpathcurveto{\pgfqpoint{0.259cm}{1.324cm}}{\pgfqpoint{0.273cm}{1.359cm}}{\pgfqpoint{0.273cm}{1.395cm}}
\pgfusepath{fill}
\begin{pgfscope}
\pgfsetdash{}{0cm}
\pgfsetlinewidth{0.818mm}
\pgfsetmiterlimit{7.0}
\pgfpathmoveto{\pgfqpoint{0.682cm}{0.671cm}}
\pgfpathlineto{\pgfqpoint{0.679cm}{1.418cm}}
\pgfusepath{stroke}
\end{pgfscope}
\pgfpathmoveto{\pgfqpoint{0.815cm}{1.399cm}}
\pgfpathcurveto{\pgfqpoint{0.815cm}{1.435cm}}{\pgfqpoint{0.801cm}{1.47cm}}{\pgfqpoint{0.775cm}{1.496cm}}
\pgfpathcurveto{\pgfqpoint{0.75cm}{1.521cm}}{\pgfqpoint{0.715cm}{1.536cm}}{\pgfqpoint{0.679cm}{1.536cm}}
\pgfpathcurveto{\pgfqpoint{0.643cm}{1.536cm}}{\pgfqpoint{0.608cm}{1.521cm}}{\pgfqpoint{0.582cm}{1.496cm}}
\pgfpathcurveto{\pgfqpoint{0.557cm}{1.47cm}}{\pgfqpoint{0.542cm}{1.435cm}}{\pgfqpoint{0.542cm}{1.399cm}}
\pgfpathcurveto{\pgfqpoint{0.542cm}{1.363cm}}{\pgfqpoint{0.557cm}{1.328cm}}{\pgfqpoint{0.582cm}{1.302cm}}
\pgfpathcurveto{\pgfqpoint{0.608cm}{1.276cm}}{\pgfqpoint{0.643cm}{1.262cm}}{\pgfqpoint{0.679cm}{1.262cm}}
\pgfpathcurveto{\pgfqpoint{0.715cm}{1.262cm}}{\pgfqpoint{0.75cm}{1.276cm}}{\pgfqpoint{0.775cm}{1.302cm}}
\pgfpathcurveto{\pgfqpoint{0.801cm}{1.328cm}}{\pgfqpoint{0.815cm}{1.363cm}}{\pgfqpoint{0.815cm}{1.399cm}}
\pgfusepath{fill}
\pgfpathmoveto{\pgfqpoint{1.345cm}{1.371cm}}
\pgfpathcurveto{\pgfqpoint{1.345cm}{1.408cm}}{\pgfqpoint{1.331cm}{1.442cm}}{\pgfqpoint{1.305cm}{1.468cm}}
\pgfpathcurveto{\pgfqpoint{1.28cm}{1.494cm}}{\pgfqpoint{1.245cm}{1.508cm}}{\pgfqpoint{1.209cm}{1.508cm}}
\pgfpathcurveto{\pgfqpoint{1.172cm}{1.508cm}}{\pgfqpoint{1.138cm}{1.494cm}}{\pgfqpoint{1.112cm}{1.468cm}}
\pgfpathcurveto{\pgfqpoint{1.087cm}{1.442cm}}{\pgfqpoint{1.072cm}{1.408cm}}{\pgfqpoint{1.072cm}{1.371cm}}
\pgfpathcurveto{\pgfqpoint{1.072cm}{1.335cm}}{\pgfqpoint{1.087cm}{1.3cm}}{\pgfqpoint{1.112cm}{1.274cm}}
\pgfpathcurveto{\pgfqpoint{1.138cm}{1.249cm}}{\pgfqpoint{1.172cm}{1.234cm}}{\pgfqpoint{1.209cm}{1.234cm}}
\pgfpathcurveto{\pgfqpoint{1.245cm}{1.234cm}}{\pgfqpoint{1.28cm}{1.249cm}}{\pgfqpoint{1.305cm}{1.274cm}}
\pgfpathcurveto{\pgfqpoint{1.331cm}{1.3cm}}{\pgfqpoint{1.345cm}{1.335cm}}{\pgfqpoint{1.345cm}{1.371cm}}
\pgfusepath{fill}
\begin{pgfscope}
\pgfsetdash{}{0cm}
\pgfsetlinewidth{0.818mm}
\pgfsetroundcap
\pgfsetmiterlimit{4.0}
\pgfpathmoveto{\pgfqpoint{0.682cm}{0.671cm}}
\pgfpathlineto{\pgfqpoint{0.682cm}{0.042cm}}
\pgfusepath{stroke}
\end{pgfscope}
\end{pgfscope}
\end{pgfscope}
\end{pgfscope}
\end{tikzpicture}}} + \zeta_{M, \varepsilon}$, where all the summands are
stationary.

The next result shows that the family of joint laws of $( \mathcal{E}^{\varepsilon} \varphi_{M, \varepsilon},
\mathcal{E}^{\varepsilon} X_{M, \varepsilon},\mathcal{E}^{\varepsilon} X^{\!\resizebox{0.6em}{!}{
\begin{tikzpicture}
\pgfpathmoveto{\pgfqpoint{0cm}{-0.035cm}}
\pgfpathlineto{\pgfqpoint{1.376cm}{-0.035cm}}
\pgfpathlineto{\pgfqpoint{1.376cm}{1.552cm}}
\pgfpathlineto{\pgfqpoint{0cm}{1.552cm}}
\pgfpathclose
\pgfusepath{clip}
\begin{pgfscope}
\begin{pgfscope}
\pgfpathmoveto{\pgfqpoint{0cm}{-0.035cm}}
\pgfpathlineto{\pgfqpoint{1.376cm}{-0.035cm}}
\pgfpathlineto{\pgfqpoint{1.376cm}{1.552cm}}
\pgfpathlineto{\pgfqpoint{0cm}{1.552cm}}
\pgfpathclose
\pgfusepath{clip}
\begin{pgfscope}
\begin{pgfscope}
\pgfsetdash{}{0cm}
\pgfsetlinewidth{0.818mm}
\pgfsetroundcap
\pgfsetroundjoin
\pgfsetmiterlimit{7.0}
\definecolor{eps2pgf_color}{gray}{0}\pgfsetstrokecolor{eps2pgf_color}\pgfsetfillcolor{eps2pgf_color}
\pgfpathmoveto{\pgfqpoint{0.117cm}{1.421cm}}
\pgfpathlineto{\pgfqpoint{0.682cm}{0.671cm}}
\pgfpathlineto{\pgfqpoint{1.246cm}{1.421cm}}
\pgfusepath{stroke}
\end{pgfscope}
\definecolor{eps2pgf_color}{gray}{0}\pgfsetstrokecolor{eps2pgf_color}\pgfsetfillcolor{eps2pgf_color}
\pgfpathmoveto{\pgfqpoint{0.273cm}{1.395cm}}
\pgfpathcurveto{\pgfqpoint{0.273cm}{1.432cm}}{\pgfqpoint{0.259cm}{1.467cm}}{\pgfqpoint{0.233cm}{1.492cm}}
\pgfpathcurveto{\pgfqpoint{0.207cm}{1.518cm}}{\pgfqpoint{0.173cm}{1.532cm}}{\pgfqpoint{0.137cm}{1.532cm}}
\pgfpathcurveto{\pgfqpoint{0.1cm}{1.532cm}}{\pgfqpoint{0.066cm}{1.518cm}}{\pgfqpoint{0.04cm}{1.492cm}}
\pgfpathcurveto{\pgfqpoint{0.014cm}{1.467cm}}{\pgfqpoint{0cm}{1.432cm}}{\pgfqpoint{0cm}{1.395cm}}
\pgfpathcurveto{\pgfqpoint{0cm}{1.359cm}}{\pgfqpoint{0.014cm}{1.324cm}}{\pgfqpoint{0.04cm}{1.299cm}}
\pgfpathcurveto{\pgfqpoint{0.066cm}{1.273cm}}{\pgfqpoint{0.1cm}{1.258cm}}{\pgfqpoint{0.137cm}{1.258cm}}
\pgfpathcurveto{\pgfqpoint{0.173cm}{1.258cm}}{\pgfqpoint{0.207cm}{1.273cm}}{\pgfqpoint{0.233cm}{1.299cm}}
\pgfpathcurveto{\pgfqpoint{0.259cm}{1.324cm}}{\pgfqpoint{0.273cm}{1.359cm}}{\pgfqpoint{0.273cm}{1.395cm}}
\pgfusepath{fill}
\begin{pgfscope}
\pgfsetdash{}{0cm}
\pgfsetlinewidth{0.818mm}
\pgfsetmiterlimit{7.0}
\pgfpathmoveto{\pgfqpoint{0.682cm}{0.671cm}}
\pgfpathlineto{\pgfqpoint{0.679cm}{1.418cm}}
\pgfusepath{stroke}
\end{pgfscope}
\pgfpathmoveto{\pgfqpoint{0.815cm}{1.399cm}}
\pgfpathcurveto{\pgfqpoint{0.815cm}{1.435cm}}{\pgfqpoint{0.801cm}{1.47cm}}{\pgfqpoint{0.775cm}{1.496cm}}
\pgfpathcurveto{\pgfqpoint{0.75cm}{1.521cm}}{\pgfqpoint{0.715cm}{1.536cm}}{\pgfqpoint{0.679cm}{1.536cm}}
\pgfpathcurveto{\pgfqpoint{0.643cm}{1.536cm}}{\pgfqpoint{0.608cm}{1.521cm}}{\pgfqpoint{0.582cm}{1.496cm}}
\pgfpathcurveto{\pgfqpoint{0.557cm}{1.47cm}}{\pgfqpoint{0.542cm}{1.435cm}}{\pgfqpoint{0.542cm}{1.399cm}}
\pgfpathcurveto{\pgfqpoint{0.542cm}{1.363cm}}{\pgfqpoint{0.557cm}{1.328cm}}{\pgfqpoint{0.582cm}{1.302cm}}
\pgfpathcurveto{\pgfqpoint{0.608cm}{1.276cm}}{\pgfqpoint{0.643cm}{1.262cm}}{\pgfqpoint{0.679cm}{1.262cm}}
\pgfpathcurveto{\pgfqpoint{0.715cm}{1.262cm}}{\pgfqpoint{0.75cm}{1.276cm}}{\pgfqpoint{0.775cm}{1.302cm}}
\pgfpathcurveto{\pgfqpoint{0.801cm}{1.328cm}}{\pgfqpoint{0.815cm}{1.363cm}}{\pgfqpoint{0.815cm}{1.399cm}}
\pgfusepath{fill}
\pgfpathmoveto{\pgfqpoint{1.345cm}{1.371cm}}
\pgfpathcurveto{\pgfqpoint{1.345cm}{1.408cm}}{\pgfqpoint{1.331cm}{1.442cm}}{\pgfqpoint{1.305cm}{1.468cm}}
\pgfpathcurveto{\pgfqpoint{1.28cm}{1.494cm}}{\pgfqpoint{1.245cm}{1.508cm}}{\pgfqpoint{1.209cm}{1.508cm}}
\pgfpathcurveto{\pgfqpoint{1.172cm}{1.508cm}}{\pgfqpoint{1.138cm}{1.494cm}}{\pgfqpoint{1.112cm}{1.468cm}}
\pgfpathcurveto{\pgfqpoint{1.087cm}{1.442cm}}{\pgfqpoint{1.072cm}{1.408cm}}{\pgfqpoint{1.072cm}{1.371cm}}
\pgfpathcurveto{\pgfqpoint{1.072cm}{1.335cm}}{\pgfqpoint{1.087cm}{1.3cm}}{\pgfqpoint{1.112cm}{1.274cm}}
\pgfpathcurveto{\pgfqpoint{1.138cm}{1.249cm}}{\pgfqpoint{1.172cm}{1.234cm}}{\pgfqpoint{1.209cm}{1.234cm}}
\pgfpathcurveto{\pgfqpoint{1.245cm}{1.234cm}}{\pgfqpoint{1.28cm}{1.249cm}}{\pgfqpoint{1.305cm}{1.274cm}}
\pgfpathcurveto{\pgfqpoint{1.331cm}{1.3cm}}{\pgfqpoint{1.345cm}{1.335cm}}{\pgfqpoint{1.345cm}{1.371cm}}
\pgfusepath{fill}
\begin{pgfscope}
\pgfsetdash{}{0cm}
\pgfsetlinewidth{0.818mm}
\pgfsetroundcap
\pgfsetmiterlimit{4.0}
\pgfpathmoveto{\pgfqpoint{0.682cm}{0.671cm}}
\pgfpathlineto{\pgfqpoint{0.682cm}{0.042cm}}
\pgfusepath{stroke}
\end{pgfscope}
\end{pgfscope}
\end{pgfscope}
\end{pgfscope}
\end{tikzpicture}}}_{M, \varepsilon})$ at any chosen time $t\geqslant 0$ is tight. In addition, we obtain bounds for arbitrary moments of the limiting measure. To this end, we denote by $(\varphi,X, X^{\!\resizebox{0.6em}{!}{
\begin{tikzpicture}
\pgfpathmoveto{\pgfqpoint{0cm}{-0.035cm}}
\pgfpathlineto{\pgfqpoint{1.376cm}{-0.035cm}}
\pgfpathlineto{\pgfqpoint{1.376cm}{1.552cm}}
\pgfpathlineto{\pgfqpoint{0cm}{1.552cm}}
\pgfpathclose
\pgfusepath{clip}
\begin{pgfscope}
\begin{pgfscope}
\pgfpathmoveto{\pgfqpoint{0cm}{-0.035cm}}
\pgfpathlineto{\pgfqpoint{1.376cm}{-0.035cm}}
\pgfpathlineto{\pgfqpoint{1.376cm}{1.552cm}}
\pgfpathlineto{\pgfqpoint{0cm}{1.552cm}}
\pgfpathclose
\pgfusepath{clip}
\begin{pgfscope}
\begin{pgfscope}
\pgfsetdash{}{0cm}
\pgfsetlinewidth{0.818mm}
\pgfsetroundcap
\pgfsetroundjoin
\pgfsetmiterlimit{7.0}
\definecolor{eps2pgf_color}{gray}{0}\pgfsetstrokecolor{eps2pgf_color}\pgfsetfillcolor{eps2pgf_color}
\pgfpathmoveto{\pgfqpoint{0.117cm}{1.421cm}}
\pgfpathlineto{\pgfqpoint{0.682cm}{0.671cm}}
\pgfpathlineto{\pgfqpoint{1.246cm}{1.421cm}}
\pgfusepath{stroke}
\end{pgfscope}
\definecolor{eps2pgf_color}{gray}{0}\pgfsetstrokecolor{eps2pgf_color}\pgfsetfillcolor{eps2pgf_color}
\pgfpathmoveto{\pgfqpoint{0.273cm}{1.395cm}}
\pgfpathcurveto{\pgfqpoint{0.273cm}{1.432cm}}{\pgfqpoint{0.259cm}{1.467cm}}{\pgfqpoint{0.233cm}{1.492cm}}
\pgfpathcurveto{\pgfqpoint{0.207cm}{1.518cm}}{\pgfqpoint{0.173cm}{1.532cm}}{\pgfqpoint{0.137cm}{1.532cm}}
\pgfpathcurveto{\pgfqpoint{0.1cm}{1.532cm}}{\pgfqpoint{0.066cm}{1.518cm}}{\pgfqpoint{0.04cm}{1.492cm}}
\pgfpathcurveto{\pgfqpoint{0.014cm}{1.467cm}}{\pgfqpoint{0cm}{1.432cm}}{\pgfqpoint{0cm}{1.395cm}}
\pgfpathcurveto{\pgfqpoint{0cm}{1.359cm}}{\pgfqpoint{0.014cm}{1.324cm}}{\pgfqpoint{0.04cm}{1.299cm}}
\pgfpathcurveto{\pgfqpoint{0.066cm}{1.273cm}}{\pgfqpoint{0.1cm}{1.258cm}}{\pgfqpoint{0.137cm}{1.258cm}}
\pgfpathcurveto{\pgfqpoint{0.173cm}{1.258cm}}{\pgfqpoint{0.207cm}{1.273cm}}{\pgfqpoint{0.233cm}{1.299cm}}
\pgfpathcurveto{\pgfqpoint{0.259cm}{1.324cm}}{\pgfqpoint{0.273cm}{1.359cm}}{\pgfqpoint{0.273cm}{1.395cm}}
\pgfusepath{fill}
\begin{pgfscope}
\pgfsetdash{}{0cm}
\pgfsetlinewidth{0.818mm}
\pgfsetmiterlimit{7.0}
\pgfpathmoveto{\pgfqpoint{0.682cm}{0.671cm}}
\pgfpathlineto{\pgfqpoint{0.679cm}{1.418cm}}
\pgfusepath{stroke}
\end{pgfscope}
\pgfpathmoveto{\pgfqpoint{0.815cm}{1.399cm}}
\pgfpathcurveto{\pgfqpoint{0.815cm}{1.435cm}}{\pgfqpoint{0.801cm}{1.47cm}}{\pgfqpoint{0.775cm}{1.496cm}}
\pgfpathcurveto{\pgfqpoint{0.75cm}{1.521cm}}{\pgfqpoint{0.715cm}{1.536cm}}{\pgfqpoint{0.679cm}{1.536cm}}
\pgfpathcurveto{\pgfqpoint{0.643cm}{1.536cm}}{\pgfqpoint{0.608cm}{1.521cm}}{\pgfqpoint{0.582cm}{1.496cm}}
\pgfpathcurveto{\pgfqpoint{0.557cm}{1.47cm}}{\pgfqpoint{0.542cm}{1.435cm}}{\pgfqpoint{0.542cm}{1.399cm}}
\pgfpathcurveto{\pgfqpoint{0.542cm}{1.363cm}}{\pgfqpoint{0.557cm}{1.328cm}}{\pgfqpoint{0.582cm}{1.302cm}}
\pgfpathcurveto{\pgfqpoint{0.608cm}{1.276cm}}{\pgfqpoint{0.643cm}{1.262cm}}{\pgfqpoint{0.679cm}{1.262cm}}
\pgfpathcurveto{\pgfqpoint{0.715cm}{1.262cm}}{\pgfqpoint{0.75cm}{1.276cm}}{\pgfqpoint{0.775cm}{1.302cm}}
\pgfpathcurveto{\pgfqpoint{0.801cm}{1.328cm}}{\pgfqpoint{0.815cm}{1.363cm}}{\pgfqpoint{0.815cm}{1.399cm}}
\pgfusepath{fill}
\pgfpathmoveto{\pgfqpoint{1.345cm}{1.371cm}}
\pgfpathcurveto{\pgfqpoint{1.345cm}{1.408cm}}{\pgfqpoint{1.331cm}{1.442cm}}{\pgfqpoint{1.305cm}{1.468cm}}
\pgfpathcurveto{\pgfqpoint{1.28cm}{1.494cm}}{\pgfqpoint{1.245cm}{1.508cm}}{\pgfqpoint{1.209cm}{1.508cm}}
\pgfpathcurveto{\pgfqpoint{1.172cm}{1.508cm}}{\pgfqpoint{1.138cm}{1.494cm}}{\pgfqpoint{1.112cm}{1.468cm}}
\pgfpathcurveto{\pgfqpoint{1.087cm}{1.442cm}}{\pgfqpoint{1.072cm}{1.408cm}}{\pgfqpoint{1.072cm}{1.371cm}}
\pgfpathcurveto{\pgfqpoint{1.072cm}{1.335cm}}{\pgfqpoint{1.087cm}{1.3cm}}{\pgfqpoint{1.112cm}{1.274cm}}
\pgfpathcurveto{\pgfqpoint{1.138cm}{1.249cm}}{\pgfqpoint{1.172cm}{1.234cm}}{\pgfqpoint{1.209cm}{1.234cm}}
\pgfpathcurveto{\pgfqpoint{1.245cm}{1.234cm}}{\pgfqpoint{1.28cm}{1.249cm}}{\pgfqpoint{1.305cm}{1.274cm}}
\pgfpathcurveto{\pgfqpoint{1.331cm}{1.3cm}}{\pgfqpoint{1.345cm}{1.335cm}}{\pgfqpoint{1.345cm}{1.371cm}}
\pgfusepath{fill}
\begin{pgfscope}
\pgfsetdash{}{0cm}
\pgfsetlinewidth{0.818mm}
\pgfsetroundcap
\pgfsetmiterlimit{4.0}
\pgfpathmoveto{\pgfqpoint{0.682cm}{0.671cm}}
\pgfpathlineto{\pgfqpoint{0.682cm}{0.042cm}}
\pgfusepath{stroke}
\end{pgfscope}
\end{pgfscope}
\end{pgfscope}
\end{pgfscope}
\end{tikzpicture}}})$  a canonical representative of the random variables under consideration and let 
$
\zeta\assign\varphi-X+\lambdaX^{\!\resizebox{0.6em}{!}{
\begin{tikzpicture}
\pgfpathmoveto{\pgfqpoint{0cm}{-0.035cm}}
\pgfpathlineto{\pgfqpoint{1.376cm}{-0.035cm}}
\pgfpathlineto{\pgfqpoint{1.376cm}{1.552cm}}
\pgfpathlineto{\pgfqpoint{0cm}{1.552cm}}
\pgfpathclose
\pgfusepath{clip}
\begin{pgfscope}
\begin{pgfscope}
\pgfpathmoveto{\pgfqpoint{0cm}{-0.035cm}}
\pgfpathlineto{\pgfqpoint{1.376cm}{-0.035cm}}
\pgfpathlineto{\pgfqpoint{1.376cm}{1.552cm}}
\pgfpathlineto{\pgfqpoint{0cm}{1.552cm}}
\pgfpathclose
\pgfusepath{clip}
\begin{pgfscope}
\begin{pgfscope}
\pgfsetdash{}{0cm}
\pgfsetlinewidth{0.818mm}
\pgfsetroundcap
\pgfsetroundjoin
\pgfsetmiterlimit{7.0}
\definecolor{eps2pgf_color}{gray}{0}\pgfsetstrokecolor{eps2pgf_color}\pgfsetfillcolor{eps2pgf_color}
\pgfpathmoveto{\pgfqpoint{0.117cm}{1.421cm}}
\pgfpathlineto{\pgfqpoint{0.682cm}{0.671cm}}
\pgfpathlineto{\pgfqpoint{1.246cm}{1.421cm}}
\pgfusepath{stroke}
\end{pgfscope}
\definecolor{eps2pgf_color}{gray}{0}\pgfsetstrokecolor{eps2pgf_color}\pgfsetfillcolor{eps2pgf_color}
\pgfpathmoveto{\pgfqpoint{0.273cm}{1.395cm}}
\pgfpathcurveto{\pgfqpoint{0.273cm}{1.432cm}}{\pgfqpoint{0.259cm}{1.467cm}}{\pgfqpoint{0.233cm}{1.492cm}}
\pgfpathcurveto{\pgfqpoint{0.207cm}{1.518cm}}{\pgfqpoint{0.173cm}{1.532cm}}{\pgfqpoint{0.137cm}{1.532cm}}
\pgfpathcurveto{\pgfqpoint{0.1cm}{1.532cm}}{\pgfqpoint{0.066cm}{1.518cm}}{\pgfqpoint{0.04cm}{1.492cm}}
\pgfpathcurveto{\pgfqpoint{0.014cm}{1.467cm}}{\pgfqpoint{0cm}{1.432cm}}{\pgfqpoint{0cm}{1.395cm}}
\pgfpathcurveto{\pgfqpoint{0cm}{1.359cm}}{\pgfqpoint{0.014cm}{1.324cm}}{\pgfqpoint{0.04cm}{1.299cm}}
\pgfpathcurveto{\pgfqpoint{0.066cm}{1.273cm}}{\pgfqpoint{0.1cm}{1.258cm}}{\pgfqpoint{0.137cm}{1.258cm}}
\pgfpathcurveto{\pgfqpoint{0.173cm}{1.258cm}}{\pgfqpoint{0.207cm}{1.273cm}}{\pgfqpoint{0.233cm}{1.299cm}}
\pgfpathcurveto{\pgfqpoint{0.259cm}{1.324cm}}{\pgfqpoint{0.273cm}{1.359cm}}{\pgfqpoint{0.273cm}{1.395cm}}
\pgfusepath{fill}
\begin{pgfscope}
\pgfsetdash{}{0cm}
\pgfsetlinewidth{0.818mm}
\pgfsetmiterlimit{7.0}
\pgfpathmoveto{\pgfqpoint{0.682cm}{0.671cm}}
\pgfpathlineto{\pgfqpoint{0.679cm}{1.418cm}}
\pgfusepath{stroke}
\end{pgfscope}
\pgfpathmoveto{\pgfqpoint{0.815cm}{1.399cm}}
\pgfpathcurveto{\pgfqpoint{0.815cm}{1.435cm}}{\pgfqpoint{0.801cm}{1.47cm}}{\pgfqpoint{0.775cm}{1.496cm}}
\pgfpathcurveto{\pgfqpoint{0.75cm}{1.521cm}}{\pgfqpoint{0.715cm}{1.536cm}}{\pgfqpoint{0.679cm}{1.536cm}}
\pgfpathcurveto{\pgfqpoint{0.643cm}{1.536cm}}{\pgfqpoint{0.608cm}{1.521cm}}{\pgfqpoint{0.582cm}{1.496cm}}
\pgfpathcurveto{\pgfqpoint{0.557cm}{1.47cm}}{\pgfqpoint{0.542cm}{1.435cm}}{\pgfqpoint{0.542cm}{1.399cm}}
\pgfpathcurveto{\pgfqpoint{0.542cm}{1.363cm}}{\pgfqpoint{0.557cm}{1.328cm}}{\pgfqpoint{0.582cm}{1.302cm}}
\pgfpathcurveto{\pgfqpoint{0.608cm}{1.276cm}}{\pgfqpoint{0.643cm}{1.262cm}}{\pgfqpoint{0.679cm}{1.262cm}}
\pgfpathcurveto{\pgfqpoint{0.715cm}{1.262cm}}{\pgfqpoint{0.75cm}{1.276cm}}{\pgfqpoint{0.775cm}{1.302cm}}
\pgfpathcurveto{\pgfqpoint{0.801cm}{1.328cm}}{\pgfqpoint{0.815cm}{1.363cm}}{\pgfqpoint{0.815cm}{1.399cm}}
\pgfusepath{fill}
\pgfpathmoveto{\pgfqpoint{1.345cm}{1.371cm}}
\pgfpathcurveto{\pgfqpoint{1.345cm}{1.408cm}}{\pgfqpoint{1.331cm}{1.442cm}}{\pgfqpoint{1.305cm}{1.468cm}}
\pgfpathcurveto{\pgfqpoint{1.28cm}{1.494cm}}{\pgfqpoint{1.245cm}{1.508cm}}{\pgfqpoint{1.209cm}{1.508cm}}
\pgfpathcurveto{\pgfqpoint{1.172cm}{1.508cm}}{\pgfqpoint{1.138cm}{1.494cm}}{\pgfqpoint{1.112cm}{1.468cm}}
\pgfpathcurveto{\pgfqpoint{1.087cm}{1.442cm}}{\pgfqpoint{1.072cm}{1.408cm}}{\pgfqpoint{1.072cm}{1.371cm}}
\pgfpathcurveto{\pgfqpoint{1.072cm}{1.335cm}}{\pgfqpoint{1.087cm}{1.3cm}}{\pgfqpoint{1.112cm}{1.274cm}}
\pgfpathcurveto{\pgfqpoint{1.138cm}{1.249cm}}{\pgfqpoint{1.172cm}{1.234cm}}{\pgfqpoint{1.209cm}{1.234cm}}
\pgfpathcurveto{\pgfqpoint{1.245cm}{1.234cm}}{\pgfqpoint{1.28cm}{1.249cm}}{\pgfqpoint{1.305cm}{1.274cm}}
\pgfpathcurveto{\pgfqpoint{1.331cm}{1.3cm}}{\pgfqpoint{1.345cm}{1.335cm}}{\pgfqpoint{1.345cm}{1.371cm}}
\pgfusepath{fill}
\begin{pgfscope}
\pgfsetdash{}{0cm}
\pgfsetlinewidth{0.818mm}
\pgfsetroundcap
\pgfsetmiterlimit{4.0}
\pgfpathmoveto{\pgfqpoint{0.682cm}{0.671cm}}
\pgfpathlineto{\pgfqpoint{0.682cm}{0.042cm}}
\pgfusepath{stroke}
\end{pgfscope}
\end{pgfscope}
\end{pgfscope}
\end{pgfscope}
\end{tikzpicture}}}.
$

\begin{theorem}\label{thm:main}
Let $\rho$ be a weight such that $\rho^{\iota} \in L^4$ for
  some $\iota \in (0, 1)$. Then 
  the family of joint laws of $( \mathcal{E}^{\varepsilon} \varphi_{M, \varepsilon}(t),
\mathcal{E}^{\varepsilon} X_{M, \varepsilon}(t),\mathcal{E}^{\varepsilon} X^{\!\resizebox{0.6em}{!}{
\begin{tikzpicture}
\pgfpathmoveto{\pgfqpoint{0cm}{-0.035cm}}
\pgfpathlineto{\pgfqpoint{1.376cm}{-0.035cm}}
\pgfpathlineto{\pgfqpoint{1.376cm}{1.552cm}}
\pgfpathlineto{\pgfqpoint{0cm}{1.552cm}}
\pgfpathclose
\pgfusepath{clip}
\begin{pgfscope}
\begin{pgfscope}
\pgfpathmoveto{\pgfqpoint{0cm}{-0.035cm}}
\pgfpathlineto{\pgfqpoint{1.376cm}{-0.035cm}}
\pgfpathlineto{\pgfqpoint{1.376cm}{1.552cm}}
\pgfpathlineto{\pgfqpoint{0cm}{1.552cm}}
\pgfpathclose
\pgfusepath{clip}
\begin{pgfscope}
\begin{pgfscope}
\pgfsetdash{}{0cm}
\pgfsetlinewidth{0.818mm}
\pgfsetroundcap
\pgfsetroundjoin
\pgfsetmiterlimit{7.0}
\definecolor{eps2pgf_color}{gray}{0}\pgfsetstrokecolor{eps2pgf_color}\pgfsetfillcolor{eps2pgf_color}
\pgfpathmoveto{\pgfqpoint{0.117cm}{1.421cm}}
\pgfpathlineto{\pgfqpoint{0.682cm}{0.671cm}}
\pgfpathlineto{\pgfqpoint{1.246cm}{1.421cm}}
\pgfusepath{stroke}
\end{pgfscope}
\definecolor{eps2pgf_color}{gray}{0}\pgfsetstrokecolor{eps2pgf_color}\pgfsetfillcolor{eps2pgf_color}
\pgfpathmoveto{\pgfqpoint{0.273cm}{1.395cm}}
\pgfpathcurveto{\pgfqpoint{0.273cm}{1.432cm}}{\pgfqpoint{0.259cm}{1.467cm}}{\pgfqpoint{0.233cm}{1.492cm}}
\pgfpathcurveto{\pgfqpoint{0.207cm}{1.518cm}}{\pgfqpoint{0.173cm}{1.532cm}}{\pgfqpoint{0.137cm}{1.532cm}}
\pgfpathcurveto{\pgfqpoint{0.1cm}{1.532cm}}{\pgfqpoint{0.066cm}{1.518cm}}{\pgfqpoint{0.04cm}{1.492cm}}
\pgfpathcurveto{\pgfqpoint{0.014cm}{1.467cm}}{\pgfqpoint{0cm}{1.432cm}}{\pgfqpoint{0cm}{1.395cm}}
\pgfpathcurveto{\pgfqpoint{0cm}{1.359cm}}{\pgfqpoint{0.014cm}{1.324cm}}{\pgfqpoint{0.04cm}{1.299cm}}
\pgfpathcurveto{\pgfqpoint{0.066cm}{1.273cm}}{\pgfqpoint{0.1cm}{1.258cm}}{\pgfqpoint{0.137cm}{1.258cm}}
\pgfpathcurveto{\pgfqpoint{0.173cm}{1.258cm}}{\pgfqpoint{0.207cm}{1.273cm}}{\pgfqpoint{0.233cm}{1.299cm}}
\pgfpathcurveto{\pgfqpoint{0.259cm}{1.324cm}}{\pgfqpoint{0.273cm}{1.359cm}}{\pgfqpoint{0.273cm}{1.395cm}}
\pgfusepath{fill}
\begin{pgfscope}
\pgfsetdash{}{0cm}
\pgfsetlinewidth{0.818mm}
\pgfsetmiterlimit{7.0}
\pgfpathmoveto{\pgfqpoint{0.682cm}{0.671cm}}
\pgfpathlineto{\pgfqpoint{0.679cm}{1.418cm}}
\pgfusepath{stroke}
\end{pgfscope}
\pgfpathmoveto{\pgfqpoint{0.815cm}{1.399cm}}
\pgfpathcurveto{\pgfqpoint{0.815cm}{1.435cm}}{\pgfqpoint{0.801cm}{1.47cm}}{\pgfqpoint{0.775cm}{1.496cm}}
\pgfpathcurveto{\pgfqpoint{0.75cm}{1.521cm}}{\pgfqpoint{0.715cm}{1.536cm}}{\pgfqpoint{0.679cm}{1.536cm}}
\pgfpathcurveto{\pgfqpoint{0.643cm}{1.536cm}}{\pgfqpoint{0.608cm}{1.521cm}}{\pgfqpoint{0.582cm}{1.496cm}}
\pgfpathcurveto{\pgfqpoint{0.557cm}{1.47cm}}{\pgfqpoint{0.542cm}{1.435cm}}{\pgfqpoint{0.542cm}{1.399cm}}
\pgfpathcurveto{\pgfqpoint{0.542cm}{1.363cm}}{\pgfqpoint{0.557cm}{1.328cm}}{\pgfqpoint{0.582cm}{1.302cm}}
\pgfpathcurveto{\pgfqpoint{0.608cm}{1.276cm}}{\pgfqpoint{0.643cm}{1.262cm}}{\pgfqpoint{0.679cm}{1.262cm}}
\pgfpathcurveto{\pgfqpoint{0.715cm}{1.262cm}}{\pgfqpoint{0.75cm}{1.276cm}}{\pgfqpoint{0.775cm}{1.302cm}}
\pgfpathcurveto{\pgfqpoint{0.801cm}{1.328cm}}{\pgfqpoint{0.815cm}{1.363cm}}{\pgfqpoint{0.815cm}{1.399cm}}
\pgfusepath{fill}
\pgfpathmoveto{\pgfqpoint{1.345cm}{1.371cm}}
\pgfpathcurveto{\pgfqpoint{1.345cm}{1.408cm}}{\pgfqpoint{1.331cm}{1.442cm}}{\pgfqpoint{1.305cm}{1.468cm}}
\pgfpathcurveto{\pgfqpoint{1.28cm}{1.494cm}}{\pgfqpoint{1.245cm}{1.508cm}}{\pgfqpoint{1.209cm}{1.508cm}}
\pgfpathcurveto{\pgfqpoint{1.172cm}{1.508cm}}{\pgfqpoint{1.138cm}{1.494cm}}{\pgfqpoint{1.112cm}{1.468cm}}
\pgfpathcurveto{\pgfqpoint{1.087cm}{1.442cm}}{\pgfqpoint{1.072cm}{1.408cm}}{\pgfqpoint{1.072cm}{1.371cm}}
\pgfpathcurveto{\pgfqpoint{1.072cm}{1.335cm}}{\pgfqpoint{1.087cm}{1.3cm}}{\pgfqpoint{1.112cm}{1.274cm}}
\pgfpathcurveto{\pgfqpoint{1.138cm}{1.249cm}}{\pgfqpoint{1.172cm}{1.234cm}}{\pgfqpoint{1.209cm}{1.234cm}}
\pgfpathcurveto{\pgfqpoint{1.245cm}{1.234cm}}{\pgfqpoint{1.28cm}{1.249cm}}{\pgfqpoint{1.305cm}{1.274cm}}
\pgfpathcurveto{\pgfqpoint{1.331cm}{1.3cm}}{\pgfqpoint{1.345cm}{1.335cm}}{\pgfqpoint{1.345cm}{1.371cm}}
\pgfusepath{fill}
\begin{pgfscope}
\pgfsetdash{}{0cm}
\pgfsetlinewidth{0.818mm}
\pgfsetroundcap
\pgfsetmiterlimit{4.0}
\pgfpathmoveto{\pgfqpoint{0.682cm}{0.671cm}}
\pgfpathlineto{\pgfqpoint{0.682cm}{0.042cm}}
\pgfusepath{stroke}
\end{pgfscope}
\end{pgfscope}
\end{pgfscope}
\end{pgfscope}
\end{tikzpicture}}}_{M, \varepsilon}(t))$, $\varepsilon\in\mathcal{A},M>0,$ evaluated at an arbitrary time $t \geqslant 0$ is tight on $H^{-1/2-3\kappa}(\rho^{2+\kappa})\times \CC^{-1/2-\kappa}(\rho^{\sigma})\times \CC^{1/2-\kappa}(\rho^{\sigma})$. Moreover, any limit probability measure $\mu$    satisfies for all $p \in [1, \infty)$
  \[ \mathbb{E}_{\mu} \| \varphi \|_{H^{- 1 / 2 - 2\kappa} (\rho^2)}^{2 p} \lesssim  {1+\lambda^{3p}}, \qquad \mathbb{E}_{\mu} \| \zeta  \|^{2p}_{L^{2}
   (\rho^2)} \lesssim {\lambda^{p}+\lambda^{3p+4}+\lambda^{4p}}, \]
\[\mathbb{E}_{\mu} \| \zeta \|_{H^{1 - 2 \kappa}
   (\rho^2)}^2  \lesssim \lambda^2 +\lambda^{7},\qquad
   \mathbb{E}_{\mu} \| \zeta  \|_{B^{0}_{4, \infty}
   (\rho)}^4  \lesssim {\lambda +\lambda^{6}}.
\]

\end{theorem}

\begin{proof}
 Since by Lemma~\ref{lem:ext}
  \[ \mathbb{E} \|
     \mathcal{E}^{\varepsilon} X_{M, \varepsilon} (0) \|_{H^{- 1 / 2 - 2
     \kappa} (\rho^2)}^{2p} \lesssim 
     \mathbb{E} \| X_{M, \varepsilon} (0) \|_{\CC^{- 1 / 2 - \kappa,
     \varepsilon} (\rho^{\sigma})}^{2p} \lesssim 1, \]
     uniformly in $M,\varepsilon$,  we deduce from Theorem~\ref{thm:tight} that
   \[ 
   \mathbb{E} \|\mathcal{E}^{\varepsilon}
     \varphi_{M, \varepsilon} (0) \|_{H^{- 1 / 2 - 2
     \kappa} (\rho^2)}^{2 p} \lesssim  {1+\lambda^{3p}}  \]
 uniformly in $M,\varepsilon$.
  Integrating \eqref{eq:lp} in time and using the decomposition of $\varphi_{M,\varepsilon}$ leads to
\[ \| \rho^2 \phi_{M,\varepsilon} (t) \|_{L^{2, \varepsilon}}^{2 p} \leqslant \|
   \rho^2 \phi_{M,\varepsilon} (0) \|_{L^{2, \varepsilon}}^{2 p} + C_t \lambda (\lambda^{2}+\lambda^{6})^{(p+1)/2} Q_{\rho}
    (\mathbb{X}_{M,\varepsilon})^{(p + 1) / 2} \]
\[ \leqslant C_p \| \rho^2 (\varphi_{M,\varepsilon} (0) - X_{M,\varepsilon} (0))
    \|_{L^{2, \varepsilon}}^{2 p} + C_p \| \rho^2 Y_{M,\varepsilon} (0) \|_{L^{2,
    \varepsilon}}^{2 p} + C_t \lambda (\lambda^{2}+\lambda^{6})^{(p+1)/2} Q_{\rho} (\mathbb{X}_{M,\varepsilon})^{(p + 1) / 2}.
\]
Hence due to Theorem~\ref{thm:tight} we obtain a uniform bound 
\[
 \mathbb{E} \| \rho^2 \phi_{M,\varepsilon} (t) \|_{L^{2, \varepsilon}}^{2 p} \lesssim_t {\lambda^{p}+\lambda^{3p+4}},\]
for all $t\geqslant 0$.
In addition, the following expressions are bounded uniformly in
$M, \varepsilon$ according to Lemma~\ref{lem:Y1} and Theorem~\ref{th:energy-estimate}
\[ \mathbb{E} \| \eta_{M, \varepsilon} \|_{C_T \CC^{1 - \kappa, \varepsilon}
   (\rho^{\sigma})}^{2p} \lesssim \lambda^{{4p}},
\]
\[ \lambda \int_0^T \mathbb{E} \| \phi_{M, \varepsilon} (t)
   \|_{L^{4, \varepsilon} (\rho)}^4 \mathd t + \int_0^T \mathbb{E} \|
   \phi_{M, \varepsilon} (t) \|_{H^{1 - 2 \kappa, \varepsilon} (\rho^2)}^2
   \mathd t \lesssim_T \lambda^2 + \lambda^{7},\]
whenever the weight $\rho$ is such
that $\rho^{\iota} \in L^4$ for some $\iota \in (0, 1)$. In view of
stationarity of $\zeta_{M, \varepsilon}$ and the embedding $\CC^{1 -
\kappa, \varepsilon} (\rho^{\sigma}) \subset H^{1 - 2 \kappa, \varepsilon}
(\rho^2)$, we therefore obtain a uniform bound
$ \mathbb{E} \| \zeta_{M, \varepsilon} (t) \|_{H^{1 - 2 \kappa, \varepsilon}
   (\rho^2)}^2 \lesssim \lambda^2 + \lambda^{7}$ as well as $ \mathbb{E} \| \zeta_{M, \varepsilon} (t) \|_{L^{2,\varepsilon}
   (\rho^2)}^{2p} \lesssim {\lambda^{p}+\lambda^{3p+4}+\lambda^{4p}}$
for every $t \geqslant 0$. Similarly, using stationarity together with the
embedding $\CC^{1 - \kappa, \varepsilon} (\rho^{\sigma}) \subset B^{0,
\varepsilon}_{4, \infty} (\rho)$ as well as $L^{4, \varepsilon} (\rho) \subset
B^{0, \varepsilon}_{4, \infty}(\rho)$ we deduce a uniform bound
$\mathbb{E} \| \zeta_{M, \varepsilon} (t) \|_{B^{0, \varepsilon}_{4, \infty}
   (\rho)}^4  \lesssim {\lambda + \lambda^{6}}$
for every $t \geqslant 0$.

Consequently, by Lemma~\ref{lem:ext} the same
bounds hold for the corresponding extended distributions and hence the family joint laws of
$( \mathcal{E}^{\varepsilon} \varphi_{M, \varepsilon}(t),
\mathcal{E}^{\varepsilon} X_{M, \varepsilon}(t),\mathcal{E}^{\varepsilon} X^{\!\resizebox{0.6em}{!}{
\begin{tikzpicture}
\pgfpathmoveto{\pgfqpoint{0cm}{-0.035cm}}
\pgfpathlineto{\pgfqpoint{1.376cm}{-0.035cm}}
\pgfpathlineto{\pgfqpoint{1.376cm}{1.552cm}}
\pgfpathlineto{\pgfqpoint{0cm}{1.552cm}}
\pgfpathclose
\pgfusepath{clip}
\begin{pgfscope}
\begin{pgfscope}
\pgfpathmoveto{\pgfqpoint{0cm}{-0.035cm}}
\pgfpathlineto{\pgfqpoint{1.376cm}{-0.035cm}}
\pgfpathlineto{\pgfqpoint{1.376cm}{1.552cm}}
\pgfpathlineto{\pgfqpoint{0cm}{1.552cm}}
\pgfpathclose
\pgfusepath{clip}
\begin{pgfscope}
\begin{pgfscope}
\pgfsetdash{}{0cm}
\pgfsetlinewidth{0.818mm}
\pgfsetroundcap
\pgfsetroundjoin
\pgfsetmiterlimit{7.0}
\definecolor{eps2pgf_color}{gray}{0}\pgfsetstrokecolor{eps2pgf_color}\pgfsetfillcolor{eps2pgf_color}
\pgfpathmoveto{\pgfqpoint{0.117cm}{1.421cm}}
\pgfpathlineto{\pgfqpoint{0.682cm}{0.671cm}}
\pgfpathlineto{\pgfqpoint{1.246cm}{1.421cm}}
\pgfusepath{stroke}
\end{pgfscope}
\definecolor{eps2pgf_color}{gray}{0}\pgfsetstrokecolor{eps2pgf_color}\pgfsetfillcolor{eps2pgf_color}
\pgfpathmoveto{\pgfqpoint{0.273cm}{1.395cm}}
\pgfpathcurveto{\pgfqpoint{0.273cm}{1.432cm}}{\pgfqpoint{0.259cm}{1.467cm}}{\pgfqpoint{0.233cm}{1.492cm}}
\pgfpathcurveto{\pgfqpoint{0.207cm}{1.518cm}}{\pgfqpoint{0.173cm}{1.532cm}}{\pgfqpoint{0.137cm}{1.532cm}}
\pgfpathcurveto{\pgfqpoint{0.1cm}{1.532cm}}{\pgfqpoint{0.066cm}{1.518cm}}{\pgfqpoint{0.04cm}{1.492cm}}
\pgfpathcurveto{\pgfqpoint{0.014cm}{1.467cm}}{\pgfqpoint{0cm}{1.432cm}}{\pgfqpoint{0cm}{1.395cm}}
\pgfpathcurveto{\pgfqpoint{0cm}{1.359cm}}{\pgfqpoint{0.014cm}{1.324cm}}{\pgfqpoint{0.04cm}{1.299cm}}
\pgfpathcurveto{\pgfqpoint{0.066cm}{1.273cm}}{\pgfqpoint{0.1cm}{1.258cm}}{\pgfqpoint{0.137cm}{1.258cm}}
\pgfpathcurveto{\pgfqpoint{0.173cm}{1.258cm}}{\pgfqpoint{0.207cm}{1.273cm}}{\pgfqpoint{0.233cm}{1.299cm}}
\pgfpathcurveto{\pgfqpoint{0.259cm}{1.324cm}}{\pgfqpoint{0.273cm}{1.359cm}}{\pgfqpoint{0.273cm}{1.395cm}}
\pgfusepath{fill}
\begin{pgfscope}
\pgfsetdash{}{0cm}
\pgfsetlinewidth{0.818mm}
\pgfsetmiterlimit{7.0}
\pgfpathmoveto{\pgfqpoint{0.682cm}{0.671cm}}
\pgfpathlineto{\pgfqpoint{0.679cm}{1.418cm}}
\pgfusepath{stroke}
\end{pgfscope}
\pgfpathmoveto{\pgfqpoint{0.815cm}{1.399cm}}
\pgfpathcurveto{\pgfqpoint{0.815cm}{1.435cm}}{\pgfqpoint{0.801cm}{1.47cm}}{\pgfqpoint{0.775cm}{1.496cm}}
\pgfpathcurveto{\pgfqpoint{0.75cm}{1.521cm}}{\pgfqpoint{0.715cm}{1.536cm}}{\pgfqpoint{0.679cm}{1.536cm}}
\pgfpathcurveto{\pgfqpoint{0.643cm}{1.536cm}}{\pgfqpoint{0.608cm}{1.521cm}}{\pgfqpoint{0.582cm}{1.496cm}}
\pgfpathcurveto{\pgfqpoint{0.557cm}{1.47cm}}{\pgfqpoint{0.542cm}{1.435cm}}{\pgfqpoint{0.542cm}{1.399cm}}
\pgfpathcurveto{\pgfqpoint{0.542cm}{1.363cm}}{\pgfqpoint{0.557cm}{1.328cm}}{\pgfqpoint{0.582cm}{1.302cm}}
\pgfpathcurveto{\pgfqpoint{0.608cm}{1.276cm}}{\pgfqpoint{0.643cm}{1.262cm}}{\pgfqpoint{0.679cm}{1.262cm}}
\pgfpathcurveto{\pgfqpoint{0.715cm}{1.262cm}}{\pgfqpoint{0.75cm}{1.276cm}}{\pgfqpoint{0.775cm}{1.302cm}}
\pgfpathcurveto{\pgfqpoint{0.801cm}{1.328cm}}{\pgfqpoint{0.815cm}{1.363cm}}{\pgfqpoint{0.815cm}{1.399cm}}
\pgfusepath{fill}
\pgfpathmoveto{\pgfqpoint{1.345cm}{1.371cm}}
\pgfpathcurveto{\pgfqpoint{1.345cm}{1.408cm}}{\pgfqpoint{1.331cm}{1.442cm}}{\pgfqpoint{1.305cm}{1.468cm}}
\pgfpathcurveto{\pgfqpoint{1.28cm}{1.494cm}}{\pgfqpoint{1.245cm}{1.508cm}}{\pgfqpoint{1.209cm}{1.508cm}}
\pgfpathcurveto{\pgfqpoint{1.172cm}{1.508cm}}{\pgfqpoint{1.138cm}{1.494cm}}{\pgfqpoint{1.112cm}{1.468cm}}
\pgfpathcurveto{\pgfqpoint{1.087cm}{1.442cm}}{\pgfqpoint{1.072cm}{1.408cm}}{\pgfqpoint{1.072cm}{1.371cm}}
\pgfpathcurveto{\pgfqpoint{1.072cm}{1.335cm}}{\pgfqpoint{1.087cm}{1.3cm}}{\pgfqpoint{1.112cm}{1.274cm}}
\pgfpathcurveto{\pgfqpoint{1.138cm}{1.249cm}}{\pgfqpoint{1.172cm}{1.234cm}}{\pgfqpoint{1.209cm}{1.234cm}}
\pgfpathcurveto{\pgfqpoint{1.245cm}{1.234cm}}{\pgfqpoint{1.28cm}{1.249cm}}{\pgfqpoint{1.305cm}{1.274cm}}
\pgfpathcurveto{\pgfqpoint{1.331cm}{1.3cm}}{\pgfqpoint{1.345cm}{1.335cm}}{\pgfqpoint{1.345cm}{1.371cm}}
\pgfusepath{fill}
\begin{pgfscope}
\pgfsetdash{}{0cm}
\pgfsetlinewidth{0.818mm}
\pgfsetroundcap
\pgfsetmiterlimit{4.0}
\pgfpathmoveto{\pgfqpoint{0.682cm}{0.671cm}}
\pgfpathlineto{\pgfqpoint{0.682cm}{0.042cm}}
\pgfusepath{stroke}
\end{pgfscope}
\end{pgfscope}
\end{pgfscope}
\end{pgfscope}
\end{tikzpicture}}}_{M, \varepsilon}(t))$ at any time $t\geqslant 0$ is tight on $H^{-1/2-3\kappa}(\rho^{2+\kappa})\times \CC^{-1/2-\kappa}(\rho^{\sigma})\times \CC^{1/2-\kappa}(\rho^{\sigma})$. Indeed, this is a consequence of the compact embedding
$$
H^{-1/2-2\kappa}(\rho^{2})\times \CC^{-1/2-\kappa/2}(\rho^{2\sigma})\times \CC^{1/2-\kappa/2}(\rho^{2\sigma})\subset H^{-1/2-3\kappa}(\rho^{2+\kappa})\times \CC^{-1/2-\kappa}(\rho^{\sigma})\times \CC^{1/2-\kappa}(\rho^{\sigma}).
$$
Therefore up to a subsequence we may pass to the
limit as $\varepsilon \rightarrow 0$, $M \rightarrow \infty$ and the uniform moment bounds are preserved for every limit point.
\end{proof}

The marginal of $\mu$ corresponding to $\varphi$ is the desired  $\Phi^{4}_3$ measure, which we denote by $\nu$. According to the above result, $\nu$ is obtained as a limit (up to a subsequence) of the continuum extensions of the Gibbs measures $\nu_{M,\varepsilon}$ given by \eqref{eq:gibbs} as $\varepsilon \rightarrow 0$, $M \rightarrow \infty$.

\subsection{Stretched exponential integrability}
\label{s:exp}

The goal of this section is to establish better probabilistic properties of the $\Phi^{4}_{3}$ measure. Namely, we show that  $\| \rho^2 \varphi_{M,
\varepsilon} \|_{H^{- 1 / 2 - 2\kappa, \varepsilon}}^{1 - \upsilon}$ is uniformly (in $M,\varepsilon$)
exponentially integrable for every  $\upsilon=O(\kappa) > 0$, hence we recover the same stretched exponential moment bound for any limit measure $\nu$.
To this end, we revisit the energy estimate in Section \ref{s:estim} and take a particular care  to optimize the power of the quantity $\|\mathbb{X}_{M,\varepsilon}\|$ appearing in the estimates. 
Recall that it can be shown that
\begin{equation}
  \mathbb{E} [e^{\beta \| \mathbb{X}_{M, \varepsilon} \|^2}] < \infty
  \label{eq:exp-int}
\end{equation}
uniformly in $M, \varepsilon$ for a small parameter $\beta > 0$ (see
{\cite{moinat_space_time_2018}}). Accordingly, it turns out that the polynomial $Q_{\rho}(\mathbb X_{M,\varepsilon})$ on the right hand side of the bound in Lemma \ref{lemma:bounds-rhs1} shall not contain higher powers of $\|\mathbb X_{M,\varepsilon}\|$ than $8+O(\kappa)$. In the proof of Lemma \ref{lemma:bounds-rhs1} we already see what the problematic terms are. In order to allow for a refined treatment of these terms, we introduce an additional large momentum cut-off and modify the definition of $Y_{M,\varepsilon}$ from \eqref{eq:Y1}, leading to better uniform estimates and consequently to the desired stretched exponential integrability.

More precisely, let $K
> 0$ and take a compactly supported, smooth function $v : \mathbb{R} \rightarrow
\mathbb{R}_+$ such that $\| v \|_{L^1} = 1$. We define
\[ \llbracket X_{M, \varepsilon}^3 \rrbracket_{\leqslant} \assign v_K \ast_t
   \Delta^{\varepsilon}_{\leqslant K} \llbracket X_{M, \varepsilon}^3
   \rrbracket, \]
where the convolution is in the time variable and $v_K (t) \assign 2^K v (2^K
t)$. With standard arguments one can prove that
\[ \sup_{K \in \mathbb{N}} (2^{- K (3 / 2 + \kappa)} \| \llbracket X_{M,
   \varepsilon}^3 \rrbracket_{\leqslant} \|_{C_T L^{\infty, \varepsilon}})^{2
   / 3} \]
is exponentially integrable for a small parameter and therefore we can modify the definition of $\| \mathbb{X}_{M, \varepsilon} \|$ to obtain
\begin{equation}
  \| \llbracket X_{M, \varepsilon}^3 \rrbracket_{\leqslant} \|_{C_T L^{\infty,
  \varepsilon}} \lesssim 2^{K (3 / 2 + \kappa)} \| \mathbb{X}_{M, \varepsilon}
  \|^3  \label{eq:bound-X3}
\end{equation}
while still keeping the validity of~{\eqref{eq:exp-int}}. Moreover, we let $\llbracket X_{M,
\varepsilon}^3 \rrbracket_{>} \assign \llbracket X_{M, \varepsilon}^3
\rrbracket - \llbracket X_{M, \varepsilon}^3 \rrbracket_{\leqslant}$ and
define $X_{M, \varepsilon, >}^{\!\resizebox{0.6em}{!}{
\begin{tikzpicture}
\pgfpathmoveto{\pgfqpoint{0cm}{-0.035cm}}
\pgfpathlineto{\pgfqpoint{1.376cm}{-0.035cm}}
\pgfpathlineto{\pgfqpoint{1.376cm}{1.552cm}}
\pgfpathlineto{\pgfqpoint{0cm}{1.552cm}}
\pgfpathclose
\pgfusepath{clip}
\begin{pgfscope}
\begin{pgfscope}
\pgfpathmoveto{\pgfqpoint{0cm}{-0.035cm}}
\pgfpathlineto{\pgfqpoint{1.376cm}{-0.035cm}}
\pgfpathlineto{\pgfqpoint{1.376cm}{1.552cm}}
\pgfpathlineto{\pgfqpoint{0cm}{1.552cm}}
\pgfpathclose
\pgfusepath{clip}
\begin{pgfscope}
\begin{pgfscope}
\pgfsetdash{}{0cm}
\pgfsetlinewidth{0.818mm}
\pgfsetroundcap
\pgfsetroundjoin
\pgfsetmiterlimit{7.0}
\definecolor{eps2pgf_color}{gray}{0}\pgfsetstrokecolor{eps2pgf_color}\pgfsetfillcolor{eps2pgf_color}
\pgfpathmoveto{\pgfqpoint{0.117cm}{1.421cm}}
\pgfpathlineto{\pgfqpoint{0.682cm}{0.671cm}}
\pgfpathlineto{\pgfqpoint{1.246cm}{1.421cm}}
\pgfusepath{stroke}
\end{pgfscope}
\definecolor{eps2pgf_color}{gray}{0}\pgfsetstrokecolor{eps2pgf_color}\pgfsetfillcolor{eps2pgf_color}
\pgfpathmoveto{\pgfqpoint{0.273cm}{1.395cm}}
\pgfpathcurveto{\pgfqpoint{0.273cm}{1.432cm}}{\pgfqpoint{0.259cm}{1.467cm}}{\pgfqpoint{0.233cm}{1.492cm}}
\pgfpathcurveto{\pgfqpoint{0.207cm}{1.518cm}}{\pgfqpoint{0.173cm}{1.532cm}}{\pgfqpoint{0.137cm}{1.532cm}}
\pgfpathcurveto{\pgfqpoint{0.1cm}{1.532cm}}{\pgfqpoint{0.066cm}{1.518cm}}{\pgfqpoint{0.04cm}{1.492cm}}
\pgfpathcurveto{\pgfqpoint{0.014cm}{1.467cm}}{\pgfqpoint{0cm}{1.432cm}}{\pgfqpoint{0cm}{1.395cm}}
\pgfpathcurveto{\pgfqpoint{0cm}{1.359cm}}{\pgfqpoint{0.014cm}{1.324cm}}{\pgfqpoint{0.04cm}{1.299cm}}
\pgfpathcurveto{\pgfqpoint{0.066cm}{1.273cm}}{\pgfqpoint{0.1cm}{1.258cm}}{\pgfqpoint{0.137cm}{1.258cm}}
\pgfpathcurveto{\pgfqpoint{0.173cm}{1.258cm}}{\pgfqpoint{0.207cm}{1.273cm}}{\pgfqpoint{0.233cm}{1.299cm}}
\pgfpathcurveto{\pgfqpoint{0.259cm}{1.324cm}}{\pgfqpoint{0.273cm}{1.359cm}}{\pgfqpoint{0.273cm}{1.395cm}}
\pgfusepath{fill}
\begin{pgfscope}
\pgfsetdash{}{0cm}
\pgfsetlinewidth{0.818mm}
\pgfsetmiterlimit{7.0}
\pgfpathmoveto{\pgfqpoint{0.682cm}{0.671cm}}
\pgfpathlineto{\pgfqpoint{0.679cm}{1.418cm}}
\pgfusepath{stroke}
\end{pgfscope}
\pgfpathmoveto{\pgfqpoint{0.815cm}{1.399cm}}
\pgfpathcurveto{\pgfqpoint{0.815cm}{1.435cm}}{\pgfqpoint{0.801cm}{1.47cm}}{\pgfqpoint{0.775cm}{1.496cm}}
\pgfpathcurveto{\pgfqpoint{0.75cm}{1.521cm}}{\pgfqpoint{0.715cm}{1.536cm}}{\pgfqpoint{0.679cm}{1.536cm}}
\pgfpathcurveto{\pgfqpoint{0.643cm}{1.536cm}}{\pgfqpoint{0.608cm}{1.521cm}}{\pgfqpoint{0.582cm}{1.496cm}}
\pgfpathcurveto{\pgfqpoint{0.557cm}{1.47cm}}{\pgfqpoint{0.542cm}{1.435cm}}{\pgfqpoint{0.542cm}{1.399cm}}
\pgfpathcurveto{\pgfqpoint{0.542cm}{1.363cm}}{\pgfqpoint{0.557cm}{1.328cm}}{\pgfqpoint{0.582cm}{1.302cm}}
\pgfpathcurveto{\pgfqpoint{0.608cm}{1.276cm}}{\pgfqpoint{0.643cm}{1.262cm}}{\pgfqpoint{0.679cm}{1.262cm}}
\pgfpathcurveto{\pgfqpoint{0.715cm}{1.262cm}}{\pgfqpoint{0.75cm}{1.276cm}}{\pgfqpoint{0.775cm}{1.302cm}}
\pgfpathcurveto{\pgfqpoint{0.801cm}{1.328cm}}{\pgfqpoint{0.815cm}{1.363cm}}{\pgfqpoint{0.815cm}{1.399cm}}
\pgfusepath{fill}
\pgfpathmoveto{\pgfqpoint{1.345cm}{1.371cm}}
\pgfpathcurveto{\pgfqpoint{1.345cm}{1.408cm}}{\pgfqpoint{1.331cm}{1.442cm}}{\pgfqpoint{1.305cm}{1.468cm}}
\pgfpathcurveto{\pgfqpoint{1.28cm}{1.494cm}}{\pgfqpoint{1.245cm}{1.508cm}}{\pgfqpoint{1.209cm}{1.508cm}}
\pgfpathcurveto{\pgfqpoint{1.172cm}{1.508cm}}{\pgfqpoint{1.138cm}{1.494cm}}{\pgfqpoint{1.112cm}{1.468cm}}
\pgfpathcurveto{\pgfqpoint{1.087cm}{1.442cm}}{\pgfqpoint{1.072cm}{1.408cm}}{\pgfqpoint{1.072cm}{1.371cm}}
\pgfpathcurveto{\pgfqpoint{1.072cm}{1.335cm}}{\pgfqpoint{1.087cm}{1.3cm}}{\pgfqpoint{1.112cm}{1.274cm}}
\pgfpathcurveto{\pgfqpoint{1.138cm}{1.249cm}}{\pgfqpoint{1.172cm}{1.234cm}}{\pgfqpoint{1.209cm}{1.234cm}}
\pgfpathcurveto{\pgfqpoint{1.245cm}{1.234cm}}{\pgfqpoint{1.28cm}{1.249cm}}{\pgfqpoint{1.305cm}{1.274cm}}
\pgfpathcurveto{\pgfqpoint{1.331cm}{1.3cm}}{\pgfqpoint{1.345cm}{1.335cm}}{\pgfqpoint{1.345cm}{1.371cm}}
\pgfusepath{fill}
\begin{pgfscope}
\pgfsetdash{}{0cm}
\pgfsetlinewidth{0.818mm}
\pgfsetroundcap
\pgfsetmiterlimit{4.0}
\pgfpathmoveto{\pgfqpoint{0.682cm}{0.671cm}}
\pgfpathlineto{\pgfqpoint{0.682cm}{0.042cm}}
\pgfusepath{stroke}
\end{pgfscope}
\end{pgfscope}
\end{pgfscope}
\end{pgfscope}
\end{tikzpicture}}}$ to be the stationary solution of
\[ \LL_{\varepsilon} X_{M, \varepsilon, >}^{\!\resizebox{0.6em}{!}{
\begin{tikzpicture}
\pgfpathmoveto{\pgfqpoint{0cm}{-0.035cm}}
\pgfpathlineto{\pgfqpoint{1.376cm}{-0.035cm}}
\pgfpathlineto{\pgfqpoint{1.376cm}{1.552cm}}
\pgfpathlineto{\pgfqpoint{0cm}{1.552cm}}
\pgfpathclose
\pgfusepath{clip}
\begin{pgfscope}
\begin{pgfscope}
\pgfpathmoveto{\pgfqpoint{0cm}{-0.035cm}}
\pgfpathlineto{\pgfqpoint{1.376cm}{-0.035cm}}
\pgfpathlineto{\pgfqpoint{1.376cm}{1.552cm}}
\pgfpathlineto{\pgfqpoint{0cm}{1.552cm}}
\pgfpathclose
\pgfusepath{clip}
\begin{pgfscope}
\begin{pgfscope}
\pgfsetdash{}{0cm}
\pgfsetlinewidth{0.818mm}
\pgfsetroundcap
\pgfsetroundjoin
\pgfsetmiterlimit{7.0}
\definecolor{eps2pgf_color}{gray}{0}\pgfsetstrokecolor{eps2pgf_color}\pgfsetfillcolor{eps2pgf_color}
\pgfpathmoveto{\pgfqpoint{0.117cm}{1.421cm}}
\pgfpathlineto{\pgfqpoint{0.682cm}{0.671cm}}
\pgfpathlineto{\pgfqpoint{1.246cm}{1.421cm}}
\pgfusepath{stroke}
\end{pgfscope}
\definecolor{eps2pgf_color}{gray}{0}\pgfsetstrokecolor{eps2pgf_color}\pgfsetfillcolor{eps2pgf_color}
\pgfpathmoveto{\pgfqpoint{0.273cm}{1.395cm}}
\pgfpathcurveto{\pgfqpoint{0.273cm}{1.432cm}}{\pgfqpoint{0.259cm}{1.467cm}}{\pgfqpoint{0.233cm}{1.492cm}}
\pgfpathcurveto{\pgfqpoint{0.207cm}{1.518cm}}{\pgfqpoint{0.173cm}{1.532cm}}{\pgfqpoint{0.137cm}{1.532cm}}
\pgfpathcurveto{\pgfqpoint{0.1cm}{1.532cm}}{\pgfqpoint{0.066cm}{1.518cm}}{\pgfqpoint{0.04cm}{1.492cm}}
\pgfpathcurveto{\pgfqpoint{0.014cm}{1.467cm}}{\pgfqpoint{0cm}{1.432cm}}{\pgfqpoint{0cm}{1.395cm}}
\pgfpathcurveto{\pgfqpoint{0cm}{1.359cm}}{\pgfqpoint{0.014cm}{1.324cm}}{\pgfqpoint{0.04cm}{1.299cm}}
\pgfpathcurveto{\pgfqpoint{0.066cm}{1.273cm}}{\pgfqpoint{0.1cm}{1.258cm}}{\pgfqpoint{0.137cm}{1.258cm}}
\pgfpathcurveto{\pgfqpoint{0.173cm}{1.258cm}}{\pgfqpoint{0.207cm}{1.273cm}}{\pgfqpoint{0.233cm}{1.299cm}}
\pgfpathcurveto{\pgfqpoint{0.259cm}{1.324cm}}{\pgfqpoint{0.273cm}{1.359cm}}{\pgfqpoint{0.273cm}{1.395cm}}
\pgfusepath{fill}
\begin{pgfscope}
\pgfsetdash{}{0cm}
\pgfsetlinewidth{0.818mm}
\pgfsetmiterlimit{7.0}
\pgfpathmoveto{\pgfqpoint{0.682cm}{0.671cm}}
\pgfpathlineto{\pgfqpoint{0.679cm}{1.418cm}}
\pgfusepath{stroke}
\end{pgfscope}
\pgfpathmoveto{\pgfqpoint{0.815cm}{1.399cm}}
\pgfpathcurveto{\pgfqpoint{0.815cm}{1.435cm}}{\pgfqpoint{0.801cm}{1.47cm}}{\pgfqpoint{0.775cm}{1.496cm}}
\pgfpathcurveto{\pgfqpoint{0.75cm}{1.521cm}}{\pgfqpoint{0.715cm}{1.536cm}}{\pgfqpoint{0.679cm}{1.536cm}}
\pgfpathcurveto{\pgfqpoint{0.643cm}{1.536cm}}{\pgfqpoint{0.608cm}{1.521cm}}{\pgfqpoint{0.582cm}{1.496cm}}
\pgfpathcurveto{\pgfqpoint{0.557cm}{1.47cm}}{\pgfqpoint{0.542cm}{1.435cm}}{\pgfqpoint{0.542cm}{1.399cm}}
\pgfpathcurveto{\pgfqpoint{0.542cm}{1.363cm}}{\pgfqpoint{0.557cm}{1.328cm}}{\pgfqpoint{0.582cm}{1.302cm}}
\pgfpathcurveto{\pgfqpoint{0.608cm}{1.276cm}}{\pgfqpoint{0.643cm}{1.262cm}}{\pgfqpoint{0.679cm}{1.262cm}}
\pgfpathcurveto{\pgfqpoint{0.715cm}{1.262cm}}{\pgfqpoint{0.75cm}{1.276cm}}{\pgfqpoint{0.775cm}{1.302cm}}
\pgfpathcurveto{\pgfqpoint{0.801cm}{1.328cm}}{\pgfqpoint{0.815cm}{1.363cm}}{\pgfqpoint{0.815cm}{1.399cm}}
\pgfusepath{fill}
\pgfpathmoveto{\pgfqpoint{1.345cm}{1.371cm}}
\pgfpathcurveto{\pgfqpoint{1.345cm}{1.408cm}}{\pgfqpoint{1.331cm}{1.442cm}}{\pgfqpoint{1.305cm}{1.468cm}}
\pgfpathcurveto{\pgfqpoint{1.28cm}{1.494cm}}{\pgfqpoint{1.245cm}{1.508cm}}{\pgfqpoint{1.209cm}{1.508cm}}
\pgfpathcurveto{\pgfqpoint{1.172cm}{1.508cm}}{\pgfqpoint{1.138cm}{1.494cm}}{\pgfqpoint{1.112cm}{1.468cm}}
\pgfpathcurveto{\pgfqpoint{1.087cm}{1.442cm}}{\pgfqpoint{1.072cm}{1.408cm}}{\pgfqpoint{1.072cm}{1.371cm}}
\pgfpathcurveto{\pgfqpoint{1.072cm}{1.335cm}}{\pgfqpoint{1.087cm}{1.3cm}}{\pgfqpoint{1.112cm}{1.274cm}}
\pgfpathcurveto{\pgfqpoint{1.138cm}{1.249cm}}{\pgfqpoint{1.172cm}{1.234cm}}{\pgfqpoint{1.209cm}{1.234cm}}
\pgfpathcurveto{\pgfqpoint{1.245cm}{1.234cm}}{\pgfqpoint{1.28cm}{1.249cm}}{\pgfqpoint{1.305cm}{1.274cm}}
\pgfpathcurveto{\pgfqpoint{1.331cm}{1.3cm}}{\pgfqpoint{1.345cm}{1.335cm}}{\pgfqpoint{1.345cm}{1.371cm}}
\pgfusepath{fill}
\begin{pgfscope}
\pgfsetdash{}{0cm}
\pgfsetlinewidth{0.818mm}
\pgfsetroundcap
\pgfsetmiterlimit{4.0}
\pgfpathmoveto{\pgfqpoint{0.682cm}{0.671cm}}
\pgfpathlineto{\pgfqpoint{0.682cm}{0.042cm}}
\pgfusepath{stroke}
\end{pgfscope}
\end{pgfscope}
\end{pgfscope}
\end{pgfscope}
\end{tikzpicture}}} = \llbracket X_{M,
   \varepsilon}^3 \rrbracket - \llbracket X_{M, \varepsilon}^3
   \rrbracket_{\leqslant}. \]
By choosing $K$ we can have that
\[ \| X_{M, \varepsilon, >}^{\!\resizebox{0.6em}{!}{
\begin{tikzpicture}
\pgfpathmoveto{\pgfqpoint{0cm}{-0.035cm}}
\pgfpathlineto{\pgfqpoint{1.376cm}{-0.035cm}}
\pgfpathlineto{\pgfqpoint{1.376cm}{1.552cm}}
\pgfpathlineto{\pgfqpoint{0cm}{1.552cm}}
\pgfpathclose
\pgfusepath{clip}
\begin{pgfscope}
\begin{pgfscope}
\pgfpathmoveto{\pgfqpoint{0cm}{-0.035cm}}
\pgfpathlineto{\pgfqpoint{1.376cm}{-0.035cm}}
\pgfpathlineto{\pgfqpoint{1.376cm}{1.552cm}}
\pgfpathlineto{\pgfqpoint{0cm}{1.552cm}}
\pgfpathclose
\pgfusepath{clip}
\begin{pgfscope}
\begin{pgfscope}
\pgfsetdash{}{0cm}
\pgfsetlinewidth{0.818mm}
\pgfsetroundcap
\pgfsetroundjoin
\pgfsetmiterlimit{7.0}
\definecolor{eps2pgf_color}{gray}{0}\pgfsetstrokecolor{eps2pgf_color}\pgfsetfillcolor{eps2pgf_color}
\pgfpathmoveto{\pgfqpoint{0.117cm}{1.421cm}}
\pgfpathlineto{\pgfqpoint{0.682cm}{0.671cm}}
\pgfpathlineto{\pgfqpoint{1.246cm}{1.421cm}}
\pgfusepath{stroke}
\end{pgfscope}
\definecolor{eps2pgf_color}{gray}{0}\pgfsetstrokecolor{eps2pgf_color}\pgfsetfillcolor{eps2pgf_color}
\pgfpathmoveto{\pgfqpoint{0.273cm}{1.395cm}}
\pgfpathcurveto{\pgfqpoint{0.273cm}{1.432cm}}{\pgfqpoint{0.259cm}{1.467cm}}{\pgfqpoint{0.233cm}{1.492cm}}
\pgfpathcurveto{\pgfqpoint{0.207cm}{1.518cm}}{\pgfqpoint{0.173cm}{1.532cm}}{\pgfqpoint{0.137cm}{1.532cm}}
\pgfpathcurveto{\pgfqpoint{0.1cm}{1.532cm}}{\pgfqpoint{0.066cm}{1.518cm}}{\pgfqpoint{0.04cm}{1.492cm}}
\pgfpathcurveto{\pgfqpoint{0.014cm}{1.467cm}}{\pgfqpoint{0cm}{1.432cm}}{\pgfqpoint{0cm}{1.395cm}}
\pgfpathcurveto{\pgfqpoint{0cm}{1.359cm}}{\pgfqpoint{0.014cm}{1.324cm}}{\pgfqpoint{0.04cm}{1.299cm}}
\pgfpathcurveto{\pgfqpoint{0.066cm}{1.273cm}}{\pgfqpoint{0.1cm}{1.258cm}}{\pgfqpoint{0.137cm}{1.258cm}}
\pgfpathcurveto{\pgfqpoint{0.173cm}{1.258cm}}{\pgfqpoint{0.207cm}{1.273cm}}{\pgfqpoint{0.233cm}{1.299cm}}
\pgfpathcurveto{\pgfqpoint{0.259cm}{1.324cm}}{\pgfqpoint{0.273cm}{1.359cm}}{\pgfqpoint{0.273cm}{1.395cm}}
\pgfusepath{fill}
\begin{pgfscope}
\pgfsetdash{}{0cm}
\pgfsetlinewidth{0.818mm}
\pgfsetmiterlimit{7.0}
\pgfpathmoveto{\pgfqpoint{0.682cm}{0.671cm}}
\pgfpathlineto{\pgfqpoint{0.679cm}{1.418cm}}
\pgfusepath{stroke}
\end{pgfscope}
\pgfpathmoveto{\pgfqpoint{0.815cm}{1.399cm}}
\pgfpathcurveto{\pgfqpoint{0.815cm}{1.435cm}}{\pgfqpoint{0.801cm}{1.47cm}}{\pgfqpoint{0.775cm}{1.496cm}}
\pgfpathcurveto{\pgfqpoint{0.75cm}{1.521cm}}{\pgfqpoint{0.715cm}{1.536cm}}{\pgfqpoint{0.679cm}{1.536cm}}
\pgfpathcurveto{\pgfqpoint{0.643cm}{1.536cm}}{\pgfqpoint{0.608cm}{1.521cm}}{\pgfqpoint{0.582cm}{1.496cm}}
\pgfpathcurveto{\pgfqpoint{0.557cm}{1.47cm}}{\pgfqpoint{0.542cm}{1.435cm}}{\pgfqpoint{0.542cm}{1.399cm}}
\pgfpathcurveto{\pgfqpoint{0.542cm}{1.363cm}}{\pgfqpoint{0.557cm}{1.328cm}}{\pgfqpoint{0.582cm}{1.302cm}}
\pgfpathcurveto{\pgfqpoint{0.608cm}{1.276cm}}{\pgfqpoint{0.643cm}{1.262cm}}{\pgfqpoint{0.679cm}{1.262cm}}
\pgfpathcurveto{\pgfqpoint{0.715cm}{1.262cm}}{\pgfqpoint{0.75cm}{1.276cm}}{\pgfqpoint{0.775cm}{1.302cm}}
\pgfpathcurveto{\pgfqpoint{0.801cm}{1.328cm}}{\pgfqpoint{0.815cm}{1.363cm}}{\pgfqpoint{0.815cm}{1.399cm}}
\pgfusepath{fill}
\pgfpathmoveto{\pgfqpoint{1.345cm}{1.371cm}}
\pgfpathcurveto{\pgfqpoint{1.345cm}{1.408cm}}{\pgfqpoint{1.331cm}{1.442cm}}{\pgfqpoint{1.305cm}{1.468cm}}
\pgfpathcurveto{\pgfqpoint{1.28cm}{1.494cm}}{\pgfqpoint{1.245cm}{1.508cm}}{\pgfqpoint{1.209cm}{1.508cm}}
\pgfpathcurveto{\pgfqpoint{1.172cm}{1.508cm}}{\pgfqpoint{1.138cm}{1.494cm}}{\pgfqpoint{1.112cm}{1.468cm}}
\pgfpathcurveto{\pgfqpoint{1.087cm}{1.442cm}}{\pgfqpoint{1.072cm}{1.408cm}}{\pgfqpoint{1.072cm}{1.371cm}}
\pgfpathcurveto{\pgfqpoint{1.072cm}{1.335cm}}{\pgfqpoint{1.087cm}{1.3cm}}{\pgfqpoint{1.112cm}{1.274cm}}
\pgfpathcurveto{\pgfqpoint{1.138cm}{1.249cm}}{\pgfqpoint{1.172cm}{1.234cm}}{\pgfqpoint{1.209cm}{1.234cm}}
\pgfpathcurveto{\pgfqpoint{1.245cm}{1.234cm}}{\pgfqpoint{1.28cm}{1.249cm}}{\pgfqpoint{1.305cm}{1.274cm}}
\pgfpathcurveto{\pgfqpoint{1.331cm}{1.3cm}}{\pgfqpoint{1.345cm}{1.335cm}}{\pgfqpoint{1.345cm}{1.371cm}}
\pgfusepath{fill}
\begin{pgfscope}
\pgfsetdash{}{0cm}
\pgfsetlinewidth{0.818mm}
\pgfsetroundcap
\pgfsetmiterlimit{4.0}
\pgfpathmoveto{\pgfqpoint{0.682cm}{0.671cm}}
\pgfpathlineto{\pgfqpoint{0.682cm}{0.042cm}}
\pgfusepath{stroke}
\end{pgfscope}
\end{pgfscope}
\end{pgfscope}
\end{pgfscope}
\end{tikzpicture}}} \|_{C_T L^{\infty, \varepsilon}
   (\rho^{\sigma})} \lesssim 2^{- K (1 / 2 - 2 \kappa)} \| X_{M,
   \varepsilon, >}^{\!\resizebox{0.6em}{!}{
\begin{tikzpicture}
\pgfpathmoveto{\pgfqpoint{0cm}{-0.035cm}}
\pgfpathlineto{\pgfqpoint{1.376cm}{-0.035cm}}
\pgfpathlineto{\pgfqpoint{1.376cm}{1.552cm}}
\pgfpathlineto{\pgfqpoint{0cm}{1.552cm}}
\pgfpathclose
\pgfusepath{clip}
\begin{pgfscope}
\begin{pgfscope}
\pgfpathmoveto{\pgfqpoint{0cm}{-0.035cm}}
\pgfpathlineto{\pgfqpoint{1.376cm}{-0.035cm}}
\pgfpathlineto{\pgfqpoint{1.376cm}{1.552cm}}
\pgfpathlineto{\pgfqpoint{0cm}{1.552cm}}
\pgfpathclose
\pgfusepath{clip}
\begin{pgfscope}
\begin{pgfscope}
\pgfsetdash{}{0cm}
\pgfsetlinewidth{0.818mm}
\pgfsetroundcap
\pgfsetroundjoin
\pgfsetmiterlimit{7.0}
\definecolor{eps2pgf_color}{gray}{0}\pgfsetstrokecolor{eps2pgf_color}\pgfsetfillcolor{eps2pgf_color}
\pgfpathmoveto{\pgfqpoint{0.117cm}{1.421cm}}
\pgfpathlineto{\pgfqpoint{0.682cm}{0.671cm}}
\pgfpathlineto{\pgfqpoint{1.246cm}{1.421cm}}
\pgfusepath{stroke}
\end{pgfscope}
\definecolor{eps2pgf_color}{gray}{0}\pgfsetstrokecolor{eps2pgf_color}\pgfsetfillcolor{eps2pgf_color}
\pgfpathmoveto{\pgfqpoint{0.273cm}{1.395cm}}
\pgfpathcurveto{\pgfqpoint{0.273cm}{1.432cm}}{\pgfqpoint{0.259cm}{1.467cm}}{\pgfqpoint{0.233cm}{1.492cm}}
\pgfpathcurveto{\pgfqpoint{0.207cm}{1.518cm}}{\pgfqpoint{0.173cm}{1.532cm}}{\pgfqpoint{0.137cm}{1.532cm}}
\pgfpathcurveto{\pgfqpoint{0.1cm}{1.532cm}}{\pgfqpoint{0.066cm}{1.518cm}}{\pgfqpoint{0.04cm}{1.492cm}}
\pgfpathcurveto{\pgfqpoint{0.014cm}{1.467cm}}{\pgfqpoint{0cm}{1.432cm}}{\pgfqpoint{0cm}{1.395cm}}
\pgfpathcurveto{\pgfqpoint{0cm}{1.359cm}}{\pgfqpoint{0.014cm}{1.324cm}}{\pgfqpoint{0.04cm}{1.299cm}}
\pgfpathcurveto{\pgfqpoint{0.066cm}{1.273cm}}{\pgfqpoint{0.1cm}{1.258cm}}{\pgfqpoint{0.137cm}{1.258cm}}
\pgfpathcurveto{\pgfqpoint{0.173cm}{1.258cm}}{\pgfqpoint{0.207cm}{1.273cm}}{\pgfqpoint{0.233cm}{1.299cm}}
\pgfpathcurveto{\pgfqpoint{0.259cm}{1.324cm}}{\pgfqpoint{0.273cm}{1.359cm}}{\pgfqpoint{0.273cm}{1.395cm}}
\pgfusepath{fill}
\begin{pgfscope}
\pgfsetdash{}{0cm}
\pgfsetlinewidth{0.818mm}
\pgfsetmiterlimit{7.0}
\pgfpathmoveto{\pgfqpoint{0.682cm}{0.671cm}}
\pgfpathlineto{\pgfqpoint{0.679cm}{1.418cm}}
\pgfusepath{stroke}
\end{pgfscope}
\pgfpathmoveto{\pgfqpoint{0.815cm}{1.399cm}}
\pgfpathcurveto{\pgfqpoint{0.815cm}{1.435cm}}{\pgfqpoint{0.801cm}{1.47cm}}{\pgfqpoint{0.775cm}{1.496cm}}
\pgfpathcurveto{\pgfqpoint{0.75cm}{1.521cm}}{\pgfqpoint{0.715cm}{1.536cm}}{\pgfqpoint{0.679cm}{1.536cm}}
\pgfpathcurveto{\pgfqpoint{0.643cm}{1.536cm}}{\pgfqpoint{0.608cm}{1.521cm}}{\pgfqpoint{0.582cm}{1.496cm}}
\pgfpathcurveto{\pgfqpoint{0.557cm}{1.47cm}}{\pgfqpoint{0.542cm}{1.435cm}}{\pgfqpoint{0.542cm}{1.399cm}}
\pgfpathcurveto{\pgfqpoint{0.542cm}{1.363cm}}{\pgfqpoint{0.557cm}{1.328cm}}{\pgfqpoint{0.582cm}{1.302cm}}
\pgfpathcurveto{\pgfqpoint{0.608cm}{1.276cm}}{\pgfqpoint{0.643cm}{1.262cm}}{\pgfqpoint{0.679cm}{1.262cm}}
\pgfpathcurveto{\pgfqpoint{0.715cm}{1.262cm}}{\pgfqpoint{0.75cm}{1.276cm}}{\pgfqpoint{0.775cm}{1.302cm}}
\pgfpathcurveto{\pgfqpoint{0.801cm}{1.328cm}}{\pgfqpoint{0.815cm}{1.363cm}}{\pgfqpoint{0.815cm}{1.399cm}}
\pgfusepath{fill}
\pgfpathmoveto{\pgfqpoint{1.345cm}{1.371cm}}
\pgfpathcurveto{\pgfqpoint{1.345cm}{1.408cm}}{\pgfqpoint{1.331cm}{1.442cm}}{\pgfqpoint{1.305cm}{1.468cm}}
\pgfpathcurveto{\pgfqpoint{1.28cm}{1.494cm}}{\pgfqpoint{1.245cm}{1.508cm}}{\pgfqpoint{1.209cm}{1.508cm}}
\pgfpathcurveto{\pgfqpoint{1.172cm}{1.508cm}}{\pgfqpoint{1.138cm}{1.494cm}}{\pgfqpoint{1.112cm}{1.468cm}}
\pgfpathcurveto{\pgfqpoint{1.087cm}{1.442cm}}{\pgfqpoint{1.072cm}{1.408cm}}{\pgfqpoint{1.072cm}{1.371cm}}
\pgfpathcurveto{\pgfqpoint{1.072cm}{1.335cm}}{\pgfqpoint{1.087cm}{1.3cm}}{\pgfqpoint{1.112cm}{1.274cm}}
\pgfpathcurveto{\pgfqpoint{1.138cm}{1.249cm}}{\pgfqpoint{1.172cm}{1.234cm}}{\pgfqpoint{1.209cm}{1.234cm}}
\pgfpathcurveto{\pgfqpoint{1.245cm}{1.234cm}}{\pgfqpoint{1.28cm}{1.249cm}}{\pgfqpoint{1.305cm}{1.274cm}}
\pgfpathcurveto{\pgfqpoint{1.331cm}{1.3cm}}{\pgfqpoint{1.345cm}{1.335cm}}{\pgfqpoint{1.345cm}{1.371cm}}
\pgfusepath{fill}
\begin{pgfscope}
\pgfsetdash{}{0cm}
\pgfsetlinewidth{0.818mm}
\pgfsetroundcap
\pgfsetmiterlimit{4.0}
\pgfpathmoveto{\pgfqpoint{0.682cm}{0.671cm}}
\pgfpathlineto{\pgfqpoint{0.682cm}{0.042cm}}
\pgfusepath{stroke}
\end{pgfscope}
\end{pgfscope}
\end{pgfscope}
\end{pgfscope}
\end{tikzpicture}}} \|_{C_T \mathscr{C} \hspace{.1em}^{1 / 2 - \kappa,
   \varepsilon} (\rho^{\sigma})} \lesssim 2^{- K (1 / 2 - 2 \kappa)} \|
   \mathbb{X}_{M, \varepsilon} \|^3 \lesssim \| \mathbb{X}_{M, \varepsilon}
   \|^2 \]
which holds true provided
\[ 2^{K / 2} = \| \mathbb{X}_{M, \varepsilon} \|^{1 / (1 - 4 \kappa)} .
\]
Next, we redefine $Y_{M, \varepsilon}$ to solve
\[  Y_{M, \varepsilon} = - \lambda X_{M,
   \varepsilon, >}^{\!\resizebox{0.6em}{!}{
\begin{tikzpicture}
\pgfpathmoveto{\pgfqpoint{0cm}{-0.035cm}}
\pgfpathlineto{\pgfqpoint{1.376cm}{-0.035cm}}
\pgfpathlineto{\pgfqpoint{1.376cm}{1.552cm}}
\pgfpathlineto{\pgfqpoint{0cm}{1.552cm}}
\pgfpathclose
\pgfusepath{clip}
\begin{pgfscope}
\begin{pgfscope}
\pgfpathmoveto{\pgfqpoint{0cm}{-0.035cm}}
\pgfpathlineto{\pgfqpoint{1.376cm}{-0.035cm}}
\pgfpathlineto{\pgfqpoint{1.376cm}{1.552cm}}
\pgfpathlineto{\pgfqpoint{0cm}{1.552cm}}
\pgfpathclose
\pgfusepath{clip}
\begin{pgfscope}
\begin{pgfscope}
\pgfsetdash{}{0cm}
\pgfsetlinewidth{0.818mm}
\pgfsetroundcap
\pgfsetroundjoin
\pgfsetmiterlimit{7.0}
\definecolor{eps2pgf_color}{gray}{0}\pgfsetstrokecolor{eps2pgf_color}\pgfsetfillcolor{eps2pgf_color}
\pgfpathmoveto{\pgfqpoint{0.117cm}{1.421cm}}
\pgfpathlineto{\pgfqpoint{0.682cm}{0.671cm}}
\pgfpathlineto{\pgfqpoint{1.246cm}{1.421cm}}
\pgfusepath{stroke}
\end{pgfscope}
\definecolor{eps2pgf_color}{gray}{0}\pgfsetstrokecolor{eps2pgf_color}\pgfsetfillcolor{eps2pgf_color}
\pgfpathmoveto{\pgfqpoint{0.273cm}{1.395cm}}
\pgfpathcurveto{\pgfqpoint{0.273cm}{1.432cm}}{\pgfqpoint{0.259cm}{1.467cm}}{\pgfqpoint{0.233cm}{1.492cm}}
\pgfpathcurveto{\pgfqpoint{0.207cm}{1.518cm}}{\pgfqpoint{0.173cm}{1.532cm}}{\pgfqpoint{0.137cm}{1.532cm}}
\pgfpathcurveto{\pgfqpoint{0.1cm}{1.532cm}}{\pgfqpoint{0.066cm}{1.518cm}}{\pgfqpoint{0.04cm}{1.492cm}}
\pgfpathcurveto{\pgfqpoint{0.014cm}{1.467cm}}{\pgfqpoint{0cm}{1.432cm}}{\pgfqpoint{0cm}{1.395cm}}
\pgfpathcurveto{\pgfqpoint{0cm}{1.359cm}}{\pgfqpoint{0.014cm}{1.324cm}}{\pgfqpoint{0.04cm}{1.299cm}}
\pgfpathcurveto{\pgfqpoint{0.066cm}{1.273cm}}{\pgfqpoint{0.1cm}{1.258cm}}{\pgfqpoint{0.137cm}{1.258cm}}
\pgfpathcurveto{\pgfqpoint{0.173cm}{1.258cm}}{\pgfqpoint{0.207cm}{1.273cm}}{\pgfqpoint{0.233cm}{1.299cm}}
\pgfpathcurveto{\pgfqpoint{0.259cm}{1.324cm}}{\pgfqpoint{0.273cm}{1.359cm}}{\pgfqpoint{0.273cm}{1.395cm}}
\pgfusepath{fill}
\begin{pgfscope}
\pgfsetdash{}{0cm}
\pgfsetlinewidth{0.818mm}
\pgfsetmiterlimit{7.0}
\pgfpathmoveto{\pgfqpoint{0.682cm}{0.671cm}}
\pgfpathlineto{\pgfqpoint{0.679cm}{1.418cm}}
\pgfusepath{stroke}
\end{pgfscope}
\pgfpathmoveto{\pgfqpoint{0.815cm}{1.399cm}}
\pgfpathcurveto{\pgfqpoint{0.815cm}{1.435cm}}{\pgfqpoint{0.801cm}{1.47cm}}{\pgfqpoint{0.775cm}{1.496cm}}
\pgfpathcurveto{\pgfqpoint{0.75cm}{1.521cm}}{\pgfqpoint{0.715cm}{1.536cm}}{\pgfqpoint{0.679cm}{1.536cm}}
\pgfpathcurveto{\pgfqpoint{0.643cm}{1.536cm}}{\pgfqpoint{0.608cm}{1.521cm}}{\pgfqpoint{0.582cm}{1.496cm}}
\pgfpathcurveto{\pgfqpoint{0.557cm}{1.47cm}}{\pgfqpoint{0.542cm}{1.435cm}}{\pgfqpoint{0.542cm}{1.399cm}}
\pgfpathcurveto{\pgfqpoint{0.542cm}{1.363cm}}{\pgfqpoint{0.557cm}{1.328cm}}{\pgfqpoint{0.582cm}{1.302cm}}
\pgfpathcurveto{\pgfqpoint{0.608cm}{1.276cm}}{\pgfqpoint{0.643cm}{1.262cm}}{\pgfqpoint{0.679cm}{1.262cm}}
\pgfpathcurveto{\pgfqpoint{0.715cm}{1.262cm}}{\pgfqpoint{0.75cm}{1.276cm}}{\pgfqpoint{0.775cm}{1.302cm}}
\pgfpathcurveto{\pgfqpoint{0.801cm}{1.328cm}}{\pgfqpoint{0.815cm}{1.363cm}}{\pgfqpoint{0.815cm}{1.399cm}}
\pgfusepath{fill}
\pgfpathmoveto{\pgfqpoint{1.345cm}{1.371cm}}
\pgfpathcurveto{\pgfqpoint{1.345cm}{1.408cm}}{\pgfqpoint{1.331cm}{1.442cm}}{\pgfqpoint{1.305cm}{1.468cm}}
\pgfpathcurveto{\pgfqpoint{1.28cm}{1.494cm}}{\pgfqpoint{1.245cm}{1.508cm}}{\pgfqpoint{1.209cm}{1.508cm}}
\pgfpathcurveto{\pgfqpoint{1.172cm}{1.508cm}}{\pgfqpoint{1.138cm}{1.494cm}}{\pgfqpoint{1.112cm}{1.468cm}}
\pgfpathcurveto{\pgfqpoint{1.087cm}{1.442cm}}{\pgfqpoint{1.072cm}{1.408cm}}{\pgfqpoint{1.072cm}{1.371cm}}
\pgfpathcurveto{\pgfqpoint{1.072cm}{1.335cm}}{\pgfqpoint{1.087cm}{1.3cm}}{\pgfqpoint{1.112cm}{1.274cm}}
\pgfpathcurveto{\pgfqpoint{1.138cm}{1.249cm}}{\pgfqpoint{1.172cm}{1.234cm}}{\pgfqpoint{1.209cm}{1.234cm}}
\pgfpathcurveto{\pgfqpoint{1.245cm}{1.234cm}}{\pgfqpoint{1.28cm}{1.249cm}}{\pgfqpoint{1.305cm}{1.274cm}}
\pgfpathcurveto{\pgfqpoint{1.331cm}{1.3cm}}{\pgfqpoint{1.345cm}{1.335cm}}{\pgfqpoint{1.345cm}{1.371cm}}
\pgfusepath{fill}
\begin{pgfscope}
\pgfsetdash{}{0cm}
\pgfsetlinewidth{0.818mm}
\pgfsetroundcap
\pgfsetmiterlimit{4.0}
\pgfpathmoveto{\pgfqpoint{0.682cm}{0.671cm}}
\pgfpathlineto{\pgfqpoint{0.682cm}{0.042cm}}
\pgfusepath{stroke}
\end{pgfscope}
\end{pgfscope}
\end{pgfscope}
\end{pgfscope}
\end{tikzpicture}}} - \LL_{\varepsilon}^{- 1} [3 \lambda
   (\UU^{\varepsilon}_{>} \llbracket X_{M, \varepsilon}^2 \rrbracket) \succ
   Y_{M, \varepsilon}] . \]
The estimates of Lemma~\ref{lem:Y1} are still valid with obvious
modifications. In addition, we obtain
\[ \| \rho^{\sigma} Y_{M, \varepsilon} \|_{C_T L^{\infty, \varepsilon}
   (\rho^{\sigma})} \lesssim \lambda \| \mathbb{X}_{M, \varepsilon} \|^2,
   \qquad \| \rho^{\sigma} Y_{M, \varepsilon} \|_{C_T \mathscr{C}
   \hspace{.1em}^{1 / 2 - \kappa, \varepsilon} (\rho^{\sigma})} \lesssim
   \lambda \| \mathbb{X}_{M, \varepsilon} \|^3, \]
and by interpolation it follows for $a \in [0, 1 / 2 - \kappa]$ that
\begin{equation}
  \| \rho^{\sigma} Y_{M, \varepsilon} \|_{C_T \mathscr{C} \hspace{.1em}^{a,
  \varepsilon} (\rho^{\sigma})} \lesssim \lambda \| \mathbb{X}_{M,
  \varepsilon} \|^{2 + a / (1 / 2 - \kappa)} . \label{eq:interp-Y}
\end{equation}
From now on we avoid, as usual, to specify explicitly the dependence on $M$
since it does not play any role in the estimates. The energy
equality~{\eqref{eq:en12}} in Lemma~\ref{lem:energy12} now reads
\begin{equation}
  \frac{1}{2} \partial_t \| \rho^2 \phi_{\varepsilon} \|_{L^{2,
  \varepsilon}}^2 + \Upsilon_{\varepsilon} = \Theta_{\rho^4, \varepsilon} +
  \Psi_{\rho^4, \varepsilon} + \langle \rho^4 \phi_{\varepsilon}, - \lambda
  \llbracket X_{\varepsilon}^3 \rrbracket_{\leqslant} \rangle_{\varepsilon},
  \label{eq:en12-int}
\end{equation}
where
\[
\Upsilon_{\varepsilon}:=\lambda\|\rho\phi_{\varepsilon}\|_{L^{4,\varepsilon}}^{4}+m^{2}\|\rho^{2}\psi_{\varepsilon}\|_{L^{2,\varepsilon}}^{2}+\|\rho^{2}\nabla_{\varepsilon}\psi_{\varepsilon}\|_{L^{2,\varepsilon}}^{2}
\]
and $\Theta_{\rho^4, \varepsilon}, \Psi_{\rho^4, \varepsilon}$ where defined in Lemma~\ref{lem:energy12}.
Our goal is to bound the right hand side of \eqref{eq:en12-int}
with no more than a factor $\| \mathbb{X}_{M, \varepsilon} \|^{8 + \vartheta}$
for some $\vartheta = O (\kappa)$. In view of the estimates within the proof of Lemma \ref{lemma:bounds-rhs1} we
observe that the bounds {\eqref{eq:XY}}, {\eqref{eq:Y3}}, {\eqref{eq:Y2}},
{\eqref{eq:Y11}}, {\eqref{eq:U0}} and {\eqref{eq:Z0}} need to be improved.

\begin{lemma}
  \label{lemma:bounds-rhs1-int}Let $\rho$ be a weight such that $\rho^{\iota}
  \in L^{4, 0}$ for some $\iota \in (0, 1)$. Then there is $\vartheta = O
  (\kappa)>0$ such that
  \[ | \Theta_{\rho^4, \varepsilon} | + | \Psi_{\rho^4, \varepsilon} |+|\langle \rho^4 \phi_{\varepsilon}, - \lambda
\llbracket X_{\varepsilon}^3 \rrbracket_{\leqslant} \rangle_{\varepsilon}|
     \leqslant C_{\delta}  (\lambda + \lambda^{7 / 3} | \log t|^{4 / 3} +
     \lambda^5) \| \mathbb{X}_{\varepsilon} \|^{8 + \vartheta} + \delta
     \Upsilon_{\varepsilon} . \]
\end{lemma}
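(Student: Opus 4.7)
The plan is to reproduce the proof of Lemma~\ref{lemma:bounds-rhs1} verbatim for all terms in $\Theta_{\rho^4,\varepsilon}$, where no $Y_{\varepsilon}$ appears, and for all purely paraproduct contributions in $\Psi_{\rho^4,\varepsilon}$ that already attained the target power $8+\vartheta$. These contributions are untouched by the modification of $Y_{\varepsilon}$ and the splitting of $\llbracket X_{\varepsilon}^3\rrbracket$, so their bounds are inherited directly. The improvements must focus on the six estimates that exceeded the threshold: \eqref{eq:XY}, \eqref{eq:Y3}, \eqref{eq:Y2}, \eqref{eq:Y11}, \eqref{eq:U0}, \eqref{eq:Z0}, and on the newly appearing term $\langle\rho^4\phi_{\varepsilon},-\lambda\llbracket X_{\varepsilon}^3\rrbracket_{\leqslant}\rangle_{\varepsilon}$.

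The key new input is the interpolation bound \eqref{eq:interp-Y}, which in particular yields $\|\rho^{\sigma}Y_{\varepsilon}\|_{C_T L^{\infty,\varepsilon}}\lesssim \lambda\|\mathbb{X}_{\varepsilon}\|^{2}$ and $\|\rho^{\sigma}Y_{\varepsilon}\|_{C_T\mathscr{C}^{a,\varepsilon}}\lesssim \lambda\|\mathbb{X}_{\varepsilon}\|^{2+a/(1/2-\kappa)}$ for $a\in[0,1/2-\kappa]$, saving one full power of $\|\mathbb{X}_{\varepsilon}\|$ compared to the na\"ive bound used in Lemma~\ref{lemma:bounds-rhs1}. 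First I would redo \eqref{eq:Y3}, \eqref{eq:Y2}, \eqref{eq:Y11} term by term: for instance $|\langle\rho^4\phi_{\varepsilon},-Y_{\varepsilon}^3\rangle_{\varepsilon}|\lesssim \lambda^3\|\mathbb{X}_{\varepsilon}\|^{6}\|\rho\phi_{\varepsilon}\|_{L^{4,\varepsilon}}\leqslant C_{\delta}\lambda^5\|\mathbb{X}_{\varepsilon}\|^{8}+\delta\lambda\|\rho\phi_{\varepsilon}\|_{L^{4,\varepsilon}}^4$, and analogously for the other two monomials in $Y_{\varepsilon}$. Next, I would revisit \eqref{eq:XY} by re-examining \eqref{eq:XY-res3}--\eqref{eq:XY2-res}: the bound on $X_{\varepsilon}\circ Y_{\varepsilon}$ and $X_{\varepsilon}\circ Y_{\varepsilon}^2$ now gains one (resp.\ two) fewer copies of $\|\mathbb{X}_{\varepsilon}\|$ each time $Y_{\varepsilon}$ is placed in $L^{\infty}$ or in a low regularity interpolation space, and combining with the same interpolation $\theta=\frac{1/2-4\kappa}{1-2\kappa}$ and Young's inequality gives the correct power.

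For \eqref{eq:U0}, I would again use $\|Y_{\varepsilon}\|_{L^{\infty}}\lesssim \lambda\|\mathbb{X}_{\varepsilon}\|^2$ together with the localization bound from Lemma~\ref{lem:loc} and the identity \eqref{eq:U11}; this produces $\lambda^3 C_{\delta}\|\mathbb{X}_{\varepsilon}\|^{8+\vartheta}+\delta\|\rho^2\phi_{\varepsilon}\|_{H^{1-2\kappa,\varepsilon}}^2$. The bound for $Z_{\varepsilon}$ in Lemma~\ref{lem:Z} needs to be redone under the new definition of $Y_{\varepsilon}$: each occurrence of $Y_{\varepsilon}$ in $Z_{\varepsilon}$ now costs $\lambda\|\mathbb{X}_{\varepsilon}\|^{2+\vartheta}$ rather than $\lambda\|\mathbb{X}_{\varepsilon}\|^3$, leading to $\|Z_{\varepsilon}(t)\|_{\mathscr{C}^{-1/2-\kappa,\varepsilon}(\rho^{\sigma})}\lesssim (1+\lambda|\log t|+\lambda^2)\|\mathbb{X}_{\varepsilon}\|^{6+\vartheta}$; plugging this into \eqref{eq:Z0} and applying Young's inequality with the same exponent $\theta$ yields the contribution $(\lambda+\lambda^{7/3}|\log t|^{4/3}+\lambda^3)C_{\delta}\|\mathbb{X}_{\varepsilon}\|^{8+\vartheta}$.

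Finally, the new term $\langle\rho^4\phi_{\varepsilon},-\lambda\llbracket X_{\varepsilon}^3\rrbracket_{\leqslant}\rangle_{\varepsilon}$ is estimated directly using \eqref{eq:bound-X3}: by H\"older and the integrability of the weight,
\[
|\langle\rho^4\phi_{\varepsilon},-\lambda\llbracket X_{\varepsilon}^3\rrbracket_{\leqslant}\rangle_{\varepsilon}|\lesssim \lambda\,2^{K(3/2+\kappa)}\|\mathbb{X}_{\varepsilon}\|^3\|\rho^{4-\iota}\|_{L^{4/3,0}}\|\rho^{\iota}\phi_{\varepsilon}\|_{L^{4,\varepsilon}},
\]
and the choice $2^{K/2}=\|\mathbb{X}_{\varepsilon}\|^{1/(1-4\kappa)}$ gives $2^{K(3/2+\kappa)}\lesssim\|\mathbb{X}_{\varepsilon}\|^{3+\vartheta}$, so Young's inequality yields $C_{\delta}\lambda\|\mathbb{X}_{\varepsilon}\|^{8+\vartheta}+\delta\lambda\|\rho\phi_{\varepsilon}\|_{L^{4,\varepsilon}}^4$. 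Summing all contributions and absorbing the $\delta$-terms into $\Upsilon_{\varepsilon}$ completes the proof. The main obstacle is bookkeeping: one has to verify, for every single term that involves some combination of $X_{\varepsilon}$ and $Y_{\varepsilon}$, that the improved interpolation \eqref{eq:interp-Y} combined with an optimal choice of Besov regularity places the Young inequality exactly at the threshold $8+\vartheta$, avoiding any accidental $\|\mathbb{X}_{\varepsilon}\|^{>8+\vartheta}$ from the stochastic data, which would preclude the exponential integrability \eqref{eq:exp-int}.
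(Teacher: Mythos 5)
Your overall plan is sound at a high level—the two new inputs (the refined interpolation \eqref{eq:interp-Y} and the decomposition of $\llbracket X_\varepsilon^3\rrbracket$ with the cut-off $K$) are indeed the engine of the improvement, and your treatments of \eqref{eq:Y3}, \eqref{eq:Y2}, \eqref{eq:Y11}, \eqref{eq:U0}, and the new cut-off term $\langle\rho^4\phi_\varepsilon,-\lambda\llbracket X_\varepsilon^3\rrbracket_{\leqslant}\rangle_\varepsilon$ are essentially the same as the paper's (modulo the small bookkeeping slip that \eqref{eq:U0} yields $(\lambda^4+\lambda^5)$ rather than $\lambda^3$). However, there is a genuine gap in your treatment of the $Z_\varepsilon$ term and (more subtly) in the resonant pieces of $X_\varepsilon Y_\varepsilon$, $X_\varepsilon Y_\varepsilon^2$.

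The flaw is in proposing to re-prove a single bound $\|Z_\varepsilon(t)\|_{\mathscr{C}^{-1/2-\kappa}}\lesssim(1+\lambda|\log t|+\lambda^2)\|\mathbb{X}_\varepsilon\|^{6+\vartheta}$ and then feed it into \eqref{eq:Z0} with the same interpolation exponent $\theta=\frac{1/2-4\kappa}{1-2\kappa}$. That Young step converts a factor $\|\mathbb{X}_\varepsilon\|^{m}$ into $\|\mathbb{X}_\varepsilon\|^{m\cdot 4/(2+\theta)}$; since $4/(2+\theta)\to 8/5$ as $\kappa\to 0$, the power $6+\vartheta$ would become $\approx (48/5)+\vartheta\approx 9.6+\vartheta$, exceeding the target $8+\vartheta$ and thereby destroying the stretched exponential integrability argument. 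The exponent $8/5$ is fine only for the single term $\tthreethreer{X}_\varepsilon$ inside $Z_\varepsilon$, because that one costs merely $\|\mathbb{X}_\varepsilon\|^5$ (indeed $5\cdot 8/5=8$). The other pieces—$\ttwothreer{\tilde X}_\varepsilon Y_\varepsilon$, $\bar C_\varepsilon(Y_\varepsilon,\cdot,\cdot)$, the localized convolution term—already have better spatial regularity (e.g.\ $\mathscr{C}^{-\kappa}$ or $L^\infty$) and so must be paired with $\phi_\varepsilon$ in cheaper spaces ($B^{\kappa}_{1,1}$ with $\theta=\frac{1-4\kappa}{1-2\kappa}$ so that $4/(2+\theta)\to 4/3$, or even $L^1$ and a bare $\lambda^{4/3}$-type Young). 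The correct proof therefore decomposes $\langle\rho^4\phi_\varepsilon,\lambda^2 Z_\varepsilon\rangle_\varepsilon$ term by term and chooses, for each piece, the optimal test-regularity of $\phi_\varepsilon$ so that the Young penalty and the stochastic cost multiply to exactly $\|\mathbb{X}_\varepsilon\|^{8+\vartheta}$; collapsing $Z_\varepsilon$ to a single $\mathscr{C}^{-1/2-\kappa}$ norm wastes the regularity of most of its constituents. The same caution applies to the paraproduct vs.\ resonant split in the $X_\varepsilon Y_\varepsilon^2$ terms: the resonant $X_\varepsilon\circ Y_\varepsilon^2\in\mathscr{C}^{-\kappa}$ (cost $\|\mathbb{X}_\varepsilon\|^{6+\vartheta}$) should be paired with $\phi_\varepsilon\in B^{\kappa}_{1,1}$ using $\theta=\frac{1-4\kappa}{1-2\kappa}$, whereas the paraproduct $X_\varepsilon\Join Y_\varepsilon^2\in\mathscr{C}^{-1/2-\kappa}$ (cost $\|\mathbb{X}_\varepsilon\|^5$) should be paired with $\phi_\varepsilon\in B^{1/2+\kappa}_{1,1}$; your phrase ``combining with the same interpolation $\theta=\frac{1/2-4\kappa}{1-2\kappa}$'' would again push the resonant contribution past the threshold.
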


\begin{proof}
  Let us begin with a new bound for the term with $X_{\varepsilon}
  Y_{\varepsilon}^2$ appearing in {\eqref{eq:XY}}. For the resonant term we
  get from the interpolation estimate~{\eqref{eq:interp-Y}} that the
  bound~{\eqref{eq:XY2-res}} can be updated as
  \[ \| \rho^{\sigma} X_{\varepsilon} \circ Y_{\varepsilon}^2 \|_{C_T
     \mathscr{C} \hspace{.1em}^{- \kappa, \varepsilon}} \lesssim \lambda^2 \|
     \mathbb{X}_{\varepsilon} \|^{6 + \vartheta} + \lambda^3 \|
     \mathbb{X}_{\varepsilon} \|^{5 + \vartheta} \lesssim (\lambda^2 +
     \lambda^3) \| \mathbb{X}_{\varepsilon} \|^{6 + \vartheta} \]
  where we used that, due to the presence of the localizer (see~{\eqref{eq:U11}}), we can bound
  \begin{equation}
    \left\| \rho^{\sigma} \UU_{>} \llbracket X_{\varepsilon}^2 \rrbracket
    \right\|_{\mathscr{C} \hspace{.1em}^{- 3 / 2 + 2 \kappa, \varepsilon}}
    \lesssim \| \rho^{\sigma} \llbracket X_{\varepsilon}^2 \rrbracket
    \|_{\mathscr{C} \hspace{.1em}^{- 1 - \kappa, \varepsilon}} \left( 1 +
    \lambda \| \rho^{\sigma} \llbracket X_{\varepsilon}^2 \rrbracket
    \|_{\mathscr{C} \hspace{.1em}^{- 1 - \kappa, \varepsilon}} \right)^{- (1 -
    6 \kappa)} \lesssim \| \mathbb{X}_{\varepsilon} \|^{\vartheta}
    \label{eq:impro-X2}
  \end{equation}
  giving an improved bound for the paracontrolled term which reads as follows
  \[ \left\| \rho^{4 \sigma} X_{\varepsilon} \circ \left( 2 Y_{\varepsilon}
     \prec \LL_{\varepsilon}^{- 1} \left[ 3 \lambda \left( \UU_{>} \llbracket
     X_{\varepsilon}^2 \rrbracket \right) \succ Y_{\varepsilon} \right]
     \right) \right\|_{\mathscr{C} \hspace{.1em}^{- \kappa, \varepsilon}} \]
  \[ \lesssim \lambda \| \rho^{\sigma} X_{\varepsilon} \|_{\mathscr{C}
     \hspace{.1em}^{- 1 / 2 - \kappa, \varepsilon}} \| \rho^{\sigma}
     Y_{\varepsilon} \|_{L^{\infty, \varepsilon}}^2 \left\| \rho^{\sigma}
     \UU_{>} \llbracket X_{\varepsilon}^2 \rrbracket \right\|_{\mathscr{C}
     \hspace{.1em}^{- 3 / 2 + 2 \kappa, \varepsilon}} \lesssim \lambda^3 \|
     \mathbb{X}_{\varepsilon} \|^{5 + \vartheta} . \]
  Consequently, for $\theta = \frac{1 - 4 \kappa}{1 - 2 \kappa}$
  \[ \lambda | \langle \rho^4 \phi_{\varepsilon}, X_{\varepsilon} \circ
     Y_{\varepsilon}^2 \rangle_{\varepsilon} | \lesssim \lambda \|
     \rho^{\sigma} X_{\varepsilon} \circ Y_{\varepsilon}^2 \|_{\mathscr{C}
     \hspace{.1em}^{- \kappa, \varepsilon}} \| \rho^{4 - \sigma}
     \phi_{\varepsilon} \|_{B^{\kappa, \varepsilon}_{1, 1}} \lesssim
     (\lambda^3 + \lambda^4) \| \mathbb{X}_{\varepsilon} \|^{6 + \vartheta} \|
     \rho \phi_{\varepsilon} \|^{\theta}_{L^{4, \varepsilon}} \| \rho^2
     \phi_{\varepsilon} \|_{H^{1 - 2 \kappa, \varepsilon}}^{1 - \theta} \]
  \[ \leqslant (\lambda^{(12 - \theta) / (2 + \theta)} + \lambda^{(16 -
     \theta) / (2 + \theta)}) C_{\delta} \| \mathbb{X}_{\varepsilon} \|^{8 +
     \vartheta} + \delta \Upsilon_{\varepsilon} . \]
  For the paraproducts we have for $\theta = \frac{1 / 2 - 4 \kappa}{1 - 2
  \kappa}$
  \[ \lambda | \langle \rho^4 \phi_{\varepsilon}, X_{\varepsilon} \Join
     Y_{\varepsilon}^2 \rangle_{\varepsilon} | \lesssim \lambda \| \rho^{4 - 2
     \sigma} \phi_{\varepsilon} \|_{B^{1 / 2 + \kappa,\varepsilon}_{1, 1}} \|
     \rho^{\sigma} X_{\varepsilon} \|_{\mathscr{C} \hspace{.1em}^{- 1 / 2 -
     \kappa, \varepsilon}} \| \rho^{\sigma} Y_{\varepsilon} \|_{L^{\infty,
     \varepsilon}}^2 \]
  \[ \lesssim \lambda^3 \| \mathbb{X}_{\varepsilon} \|^5 \| \rho
     \phi_{\varepsilon} \|^{\theta}_{L^{4, \varepsilon}} \| \rho^2
     \phi_{\varepsilon} \|^{1 - \theta}_{H^{1 - 2 \kappa,\varepsilon}} \leqslant
     \lambda^{(12 - \theta) / (2 + \theta)} C_{\delta} \|
     \mathbb{X}_{\varepsilon} \|^8 + \delta \Upsilon_{\varepsilon} . \]
  Let us now consider the term with $X_{\varepsilon} Y_{\varepsilon}$ always
  in {\eqref{eq:XY}}. In view of {\eqref{eq:XY-res}}, {\eqref{eq:XY-par1}},
  {\eqref{eq:XY-par2}} we shall modify the bound of the resonant
  product which using the decomposition {\eqref{eq:XY-res3}} together with
  {\eqref{eq:XY-res}} and the bound~{\eqref{eq:impro-X2}}. We obtain
  \[ \| \rho^{\sigma} X_{\varepsilon} \circ Y_{\varepsilon} \|_{\mathscr{C}
     \hspace{.1em}^{- \kappa, \varepsilon}} \lesssim \lambda \|
     \mathbb{X}_{\varepsilon} \|^4 + \lambda^2 \| \mathbb{X}_{\varepsilon}
     \|^{3 + \vartheta} \lesssim (\lambda + \lambda^2) \|
     \mathbb{X}_{\varepsilon} \|^4 , \]
  and consequently, for $\theta = \frac{1 - 4 \kappa}{1 - 2 \kappa}$,
  \[ \lambda | \langle \rho^4 \phi_{\varepsilon}^2, X_{\varepsilon} \circ
     Y_{\varepsilon} \rangle_{\varepsilon} | \lesssim \lambda \| \rho^{\sigma}
     X_{\varepsilon} \circ Y_{\varepsilon} \|_{\mathscr{C} \hspace{.1em}^{-
     \kappa, \varepsilon}} \| \rho^{4 - \sigma} \phi_{\varepsilon}^2
     \|_{B^{\kappa, \varepsilon}_{1, 1}} \lesssim (\lambda^2 + \lambda^3) \|
     \mathbb{X}_{\varepsilon} \|^4 \| \rho \phi_{\varepsilon} \|^{1 +
     \theta}_{L^{4, \varepsilon}} \| \rho^2 \phi_{\varepsilon} \|_{H^{1 - 2
     \kappa, \varepsilon}}^{1 - \theta} \]
  \[ \leqslant (\lambda^{(7 - \theta) / (1 + \theta)} + \lambda^{(11 - \theta)
     / (1 + \theta)}) C_{\delta} \| \mathbb{X}_{\varepsilon} \|^8 + \delta
     \Upsilon_{\varepsilon} . \]
  For the paraproducts we have for $\theta = \frac{1 / 2 - 4 \kappa}{1 - 2
  \kappa}$
  \[ \lambda | \langle \rho^4 \phi_{\varepsilon}^2, X_{\varepsilon} \Join
     Y_{\varepsilon} \rangle_{\varepsilon} | \lesssim \lambda \| \rho^{4 - 2
     \sigma} \phi_{\varepsilon}^2 \|_{B^{1 / 2 + \kappa, \varepsilon}_{1, 1}}
     \| \rho^{\sigma} X_{\varepsilon} \|_{\mathscr{C} \hspace{.1em}^{- 1 / 2 -
     \kappa, \varepsilon}} \| \rho^{\sigma} Y_{\varepsilon} \|_{L^{\infty,
     \varepsilon}} \]
  \[ \lesssim \lambda^2 \| \mathbb{X}_{\varepsilon} \|^3 \| \rho
     \phi_{\varepsilon} \|^{1 + \theta}_{L^{4, \varepsilon}} \| \rho^2
     \phi_{\varepsilon} \|^{1 - \theta}_{H^{1 - 2 \kappa, \varepsilon}}
     \leqslant \lambda^{(7 - \theta) / (1 + \theta)} C_{\delta} \|
     \mathbb{X}_{\varepsilon} \|^8 + \delta \Upsilon_{\varepsilon} . \]
  With the improved bound for $Y$, {\eqref{eq:Y3}}, {\eqref{eq:Y2}}, {\eqref{eq:Y11}}
  can be updated as follows
  \[ | \langle \rho^4 \phi_{\varepsilon}, \lambda Y_{\varepsilon}^3
     \rangle_{\varepsilon} | \lesssim \lambda \| \rho \phi_{\varepsilon}
     \|_{L^{4, \varepsilon}} \| \rho^{\sigma} Y_{\varepsilon} \|_{C_T
     L^{\infty, \varepsilon}}^3 \lesssim \lambda^4 \| \rho \phi_{\varepsilon}
     \|_{L^{4, \varepsilon}} \| \mathbb{X}_{\varepsilon} \|^6 \leqslant \delta
     \lambda \| \rho \phi_{\varepsilon} \|_{L^{4, \varepsilon}}^4 + C_{\delta}
     \lambda^5 \| \mathbb{X}_{\varepsilon} \|^8, \]
  \[ | \langle \rho^4 \phi_{\varepsilon}, 3 \lambda Y_{\varepsilon}^2
     \phi_{\varepsilon} \rangle_{\varepsilon} | \lesssim \lambda \| \rho
     \phi_{\varepsilon} \|_{L^{4, \varepsilon}}^2 \| \rho^{\sigma}
     Y_{\varepsilon} \|_{C_T L^{\infty, \varepsilon}}^2 \lesssim \lambda^3 \|
     \rho \phi_{\varepsilon} \|_{L^{4, \varepsilon}}^2 \|
     \mathbb{X}_{\varepsilon} \|^4 \leqslant \delta \lambda \| \rho
     \phi_{\varepsilon} \|_{L^{4, \varepsilon}}^4 + C_{\delta} \lambda^5 \|
     \mathbb{X}_{\varepsilon} \|^8, \]
  \[ | \langle \rho^4 \phi_{\varepsilon}, 3 \lambda Y_{\varepsilon}
     \phi_{\varepsilon}^2 \rangle_{\varepsilon} | \lesssim \lambda \| \rho
     \phi_{\varepsilon} \|_{L^{4, \varepsilon}}^3 \| \rho^{\sigma}
     Y_{\varepsilon} \|_{C_T L^{\infty, \varepsilon}} \lesssim \lambda^2 \|
     \rho \phi_{\varepsilon} \|_{L^{4, \varepsilon}}^3 \|
     \mathbb{X}_{\varepsilon} \|^2 \leqslant \delta \lambda \| \rho
     \phi_{\varepsilon} \|_{L^{4, \varepsilon}}^4 + C_{\delta} \lambda^5 \|
     \mathbb{X}_{\varepsilon} \|^8 . \]
  Now, let us update the bound~{\eqref{eq:U0}} as
    \[ \lambda \left| \langle \rho^4 \phi_{\varepsilon}, - 3
     (\UU^{\varepsilon}_{\leqslant} \llbracket X^2 \rrbracket) \succ
     Y_{\varepsilon} \rangle_{\varepsilon} \right| \leqslant (\lambda^4 +
     \lambda^5) C_{\delta} \|\mathbb{X}_{\varepsilon} \|^{8 + \vartheta} +
     \delta \| \rho^2 \phi_{\varepsilon} \|_{H^{1 - 2 \kappa, \varepsilon}}^2
     . \]
  Next, we shall improve the bound~{\eqref{eq:Z0}}. Here we need to use a different modification
  for each term appearing in $\langle \rho^4 \phi_{\varepsilon}, \lambda^2
  Z_{\varepsilon} \rangle_{\varepsilon}$ as defined in~{\eqref{eq:def-Z}}. For
  $\theta = \frac{1 / 2 - 4 \kappa}{1 - 2 \kappa}$ we bound
  \[ | \langle \rho^4 \phi_{\varepsilon}, \lambda^2
     X_{\varepsilon}^{\!\resizebox{!}{.8em}{
\begin{tikzpicture}
\pgfpathmoveto{\pgfqpoint{0cm}{-0.035cm}}
\pgfpathlineto{\pgfqpoint{1.976cm}{-0.035cm}}
\pgfpathlineto{\pgfqpoint{1.976cm}{1.94cm}}
\pgfpathlineto{\pgfqpoint{0cm}{1.94cm}}
\pgfpathclose
\pgfusepath{clip}
\begin{pgfscope}
\begin{pgfscope}
\pgfpathmoveto{\pgfqpoint{0cm}{-0.035cm}}
\pgfpathlineto{\pgfqpoint{1.976cm}{-0.035cm}}
\pgfpathlineto{\pgfqpoint{1.976cm}{1.94cm}}
\pgfpathlineto{\pgfqpoint{0cm}{1.94cm}}
\pgfpathclose
\pgfusepath{clip}
\begin{pgfscope}
\begin{pgfscope}
\pgfsetdash{}{0cm}
\pgfsetlinewidth{0.818mm}
\pgfsetroundcap
\pgfsetroundjoin
\pgfsetmiterlimit{7.0}
\definecolor{eps2pgf_color}{gray}{0}\pgfsetstrokecolor{eps2pgf_color}\pgfsetfillcolor{eps2pgf_color}
\pgfpathmoveto{\pgfqpoint{0.117cm}{1.815cm}}
\pgfpathlineto{\pgfqpoint{0.682cm}{1.065cm}}
\pgfpathlineto{\pgfqpoint{1.246cm}{1.815cm}}
\pgfusepath{stroke}
\end{pgfscope}
\definecolor{eps2pgf_color}{gray}{0}\pgfsetstrokecolor{eps2pgf_color}\pgfsetfillcolor{eps2pgf_color}
\pgfpathmoveto{\pgfqpoint{0.273cm}{1.789cm}}
\pgfpathcurveto{\pgfqpoint{0.273cm}{1.825cm}}{\pgfqpoint{0.259cm}{1.86cm}}{\pgfqpoint{0.233cm}{1.886cm}}
\pgfpathcurveto{\pgfqpoint{0.207cm}{1.912cm}}{\pgfqpoint{0.173cm}{1.926cm}}{\pgfqpoint{0.137cm}{1.926cm}}
\pgfpathcurveto{\pgfqpoint{0.1cm}{1.926cm}}{\pgfqpoint{0.066cm}{1.912cm}}{\pgfqpoint{0.04cm}{1.886cm}}
\pgfpathcurveto{\pgfqpoint{0.014cm}{1.86cm}}{\pgfqpoint{0cm}{1.825cm}}{\pgfqpoint{0cm}{1.789cm}}
\pgfpathcurveto{\pgfqpoint{0cm}{1.753cm}}{\pgfqpoint{0.014cm}{1.718cm}}{\pgfqpoint{0.04cm}{1.692cm}}
\pgfpathcurveto{\pgfqpoint{0.066cm}{1.667cm}}{\pgfqpoint{0.1cm}{1.652cm}}{\pgfqpoint{0.137cm}{1.652cm}}
\pgfpathcurveto{\pgfqpoint{0.173cm}{1.652cm}}{\pgfqpoint{0.207cm}{1.667cm}}{\pgfqpoint{0.233cm}{1.692cm}}
\pgfpathcurveto{\pgfqpoint{0.259cm}{1.718cm}}{\pgfqpoint{0.273cm}{1.753cm}}{\pgfqpoint{0.273cm}{1.789cm}}
\pgfusepath{fill}
\begin{pgfscope}
\pgfsetdash{}{0cm}
\pgfsetlinewidth{0.818mm}
\pgfsetmiterlimit{7.0}
\pgfpathmoveto{\pgfqpoint{0.682cm}{1.065cm}}
\pgfpathlineto{\pgfqpoint{0.679cm}{1.812cm}}
\pgfusepath{stroke}
\end{pgfscope}
\pgfpathmoveto{\pgfqpoint{0.815cm}{1.793cm}}
\pgfpathcurveto{\pgfqpoint{0.815cm}{1.829cm}}{\pgfqpoint{0.801cm}{1.864cm}}{\pgfqpoint{0.775cm}{1.89cm}}
\pgfpathcurveto{\pgfqpoint{0.75cm}{1.915cm}}{\pgfqpoint{0.715cm}{1.93cm}}{\pgfqpoint{0.679cm}{1.93cm}}
\pgfpathcurveto{\pgfqpoint{0.643cm}{1.93cm}}{\pgfqpoint{0.608cm}{1.915cm}}{\pgfqpoint{0.582cm}{1.89cm}}
\pgfpathcurveto{\pgfqpoint{0.557cm}{1.864cm}}{\pgfqpoint{0.542cm}{1.829cm}}{\pgfqpoint{0.542cm}{1.793cm}}
\pgfpathcurveto{\pgfqpoint{0.542cm}{1.756cm}}{\pgfqpoint{0.557cm}{1.722cm}}{\pgfqpoint{0.582cm}{1.696cm}}
\pgfpathcurveto{\pgfqpoint{0.608cm}{1.67cm}}{\pgfqpoint{0.643cm}{1.656cm}}{\pgfqpoint{0.679cm}{1.656cm}}
\pgfpathcurveto{\pgfqpoint{0.715cm}{1.656cm}}{\pgfqpoint{0.75cm}{1.67cm}}{\pgfqpoint{0.775cm}{1.696cm}}
\pgfpathcurveto{\pgfqpoint{0.801cm}{1.722cm}}{\pgfqpoint{0.815cm}{1.756cm}}{\pgfqpoint{0.815cm}{1.793cm}}
\pgfusepath{fill}
\pgfpathmoveto{\pgfqpoint{1.345cm}{1.765cm}}
\pgfpathcurveto{\pgfqpoint{1.345cm}{1.801cm}}{\pgfqpoint{1.331cm}{1.836cm}}{\pgfqpoint{1.305cm}{1.862cm}}
\pgfpathcurveto{\pgfqpoint{1.28cm}{1.887cm}}{\pgfqpoint{1.245cm}{1.902cm}}{\pgfqpoint{1.209cm}{1.902cm}}
\pgfpathcurveto{\pgfqpoint{1.172cm}{1.902cm}}{\pgfqpoint{1.138cm}{1.887cm}}{\pgfqpoint{1.112cm}{1.862cm}}
\pgfpathcurveto{\pgfqpoint{1.087cm}{1.836cm}}{\pgfqpoint{1.072cm}{1.801cm}}{\pgfqpoint{1.072cm}{1.765cm}}
\pgfpathcurveto{\pgfqpoint{1.072cm}{1.728cm}}{\pgfqpoint{1.087cm}{1.694cm}}{\pgfqpoint{1.112cm}{1.668cm}}
\pgfpathcurveto{\pgfqpoint{1.138cm}{1.642cm}}{\pgfqpoint{1.172cm}{1.628cm}}{\pgfqpoint{1.209cm}{1.628cm}}
\pgfpathcurveto{\pgfqpoint{1.245cm}{1.628cm}}{\pgfqpoint{1.28cm}{1.642cm}}{\pgfqpoint{1.305cm}{1.668cm}}
\pgfpathcurveto{\pgfqpoint{1.331cm}{1.694cm}}{\pgfqpoint{1.345cm}{1.728cm}}{\pgfqpoint{1.345cm}{1.765cm}}
\pgfusepath{fill}
\begin{pgfscope}
\pgfsetdash{}{0cm}
\pgfsetlinewidth{0.818mm}
\pgfsetroundcap
\pgfsetroundjoin
\pgfsetmiterlimit{7.0}
\pgfpathmoveto{\pgfqpoint{0.682cm}{1.065cm}}
\pgfpathlineto{\pgfqpoint{1.246cm}{0.315cm}}
\pgfpathlineto{\pgfqpoint{1.811cm}{1.065cm}}
\pgfusepath{stroke}
\end{pgfscope}
\pgfpathmoveto{\pgfqpoint{1.948cm}{1.065cm}}
\pgfpathcurveto{\pgfqpoint{1.948cm}{1.101cm}}{\pgfqpoint{1.933cm}{1.136cm}}{\pgfqpoint{1.907cm}{1.162cm}}
\pgfpathcurveto{\pgfqpoint{1.882cm}{1.187cm}}{\pgfqpoint{1.847cm}{1.202cm}}{\pgfqpoint{1.811cm}{1.202cm}}
\pgfpathcurveto{\pgfqpoint{1.775cm}{1.202cm}}{\pgfqpoint{1.74cm}{1.187cm}}{\pgfqpoint{1.714cm}{1.162cm}}
\pgfpathcurveto{\pgfqpoint{1.689cm}{1.136cm}}{\pgfqpoint{1.674cm}{1.101cm}}{\pgfqpoint{1.674cm}{1.065cm}}
\pgfpathcurveto{\pgfqpoint{1.674cm}{1.029cm}}{\pgfqpoint{1.689cm}{0.994cm}}{\pgfqpoint{1.714cm}{0.968cm}}
\pgfpathcurveto{\pgfqpoint{1.74cm}{0.942cm}}{\pgfqpoint{1.775cm}{0.928cm}}{\pgfqpoint{1.811cm}{0.928cm}}
\pgfpathcurveto{\pgfqpoint{1.847cm}{0.928cm}}{\pgfqpoint{1.882cm}{0.942cm}}{\pgfqpoint{1.907cm}{0.968cm}}
\pgfpathcurveto{\pgfqpoint{1.933cm}{0.994cm}}{\pgfqpoint{1.948cm}{1.029cm}}{\pgfqpoint{1.948cm}{1.065cm}}
\pgfusepath{fill}
\begin{pgfscope}
\pgfsetdash{}{0cm}
\pgfsetlinewidth{0.818mm}
\pgfsetmiterlimit{7.0}
\pgfpathmoveto{\pgfqpoint{1.246cm}{0.315cm}}
\pgfpathlineto{\pgfqpoint{1.244cm}{1.061cm}}
\pgfusepath{stroke}
\end{pgfscope}
\pgfpathmoveto{\pgfqpoint{1.38cm}{1.065cm}}
\pgfpathcurveto{\pgfqpoint{1.38cm}{1.101cm}}{\pgfqpoint{1.366cm}{1.136cm}}{\pgfqpoint{1.34cm}{1.162cm}}
\pgfpathcurveto{\pgfqpoint{1.315cm}{1.187cm}}{\pgfqpoint{1.28cm}{1.202cm}}{\pgfqpoint{1.244cm}{1.202cm}}
\pgfpathcurveto{\pgfqpoint{1.207cm}{1.202cm}}{\pgfqpoint{1.173cm}{1.187cm}}{\pgfqpoint{1.147cm}{1.162cm}}
\pgfpathcurveto{\pgfqpoint{1.121cm}{1.136cm}}{\pgfqpoint{1.107cm}{1.101cm}}{\pgfqpoint{1.107cm}{1.065cm}}
\pgfpathcurveto{\pgfqpoint{1.107cm}{1.029cm}}{\pgfqpoint{1.121cm}{0.994cm}}{\pgfqpoint{1.147cm}{0.968cm}}
\pgfpathcurveto{\pgfqpoint{1.173cm}{0.942cm}}{\pgfqpoint{1.207cm}{0.928cm}}{\pgfqpoint{1.244cm}{0.928cm}}
\pgfpathcurveto{\pgfqpoint{1.28cm}{0.928cm}}{\pgfqpoint{1.315cm}{0.942cm}}{\pgfqpoint{1.34cm}{0.968cm}}
\pgfpathcurveto{\pgfqpoint{1.366cm}{0.994cm}}{\pgfqpoint{1.38cm}{1.029cm}}{\pgfqpoint{1.38cm}{1.065cm}}
\pgfusepath{fill}
\begin{pgfscope}
\pgfsetdash{}{0cm}
\pgfsetlinewidth{0.818mm}
\pgfsetmiterlimit{4.0}
\pgfpathmoveto{\pgfqpoint{1.383cm}{0.178cm}}
\pgfpathcurveto{\pgfqpoint{1.383cm}{0.214cm}}{\pgfqpoint{1.369cm}{0.249cm}}{\pgfqpoint{1.343cm}{0.275cm}}
\pgfpathcurveto{\pgfqpoint{1.317cm}{0.3cm}}{\pgfqpoint{1.283cm}{0.315cm}}{\pgfqpoint{1.246cm}{0.315cm}}
\pgfpathcurveto{\pgfqpoint{1.21cm}{0.315cm}}{\pgfqpoint{1.175cm}{0.3cm}}{\pgfqpoint{1.15cm}{0.275cm}}
\pgfpathcurveto{\pgfqpoint{1.124cm}{0.249cm}}{\pgfqpoint{1.11cm}{0.214cm}}{\pgfqpoint{1.11cm}{0.178cm}}
\pgfpathcurveto{\pgfqpoint{1.11cm}{0.141cm}}{\pgfqpoint{1.124cm}{0.107cm}}{\pgfqpoint{1.15cm}{0.081cm}}
\pgfpathcurveto{\pgfqpoint{1.175cm}{0.055cm}}{\pgfqpoint{1.21cm}{0.041cm}}{\pgfqpoint{1.246cm}{0.041cm}}
\pgfpathcurveto{\pgfqpoint{1.283cm}{0.041cm}}{\pgfqpoint{1.317cm}{0.055cm}}{\pgfqpoint{1.343cm}{0.081cm}}
\pgfpathcurveto{\pgfqpoint{1.369cm}{0.107cm}}{\pgfqpoint{1.383cm}{0.141cm}}{\pgfqpoint{1.383cm}{0.178cm}}
\pgfusepath{stroke}
\end{pgfscope}
\end{pgfscope}
\end{pgfscope}
\end{pgfscope}
\end{tikzpicture}}} \rangle_{\varepsilon} | \lesssim \lambda^2
     \| \rho^{4 - \sigma} \phi_{\varepsilon} \|_{B^{1 / 2 + \kappa,
     \varepsilon}_{1, 1}} \| \rho^{\sigma} X_{\varepsilon}^{\!\resizebox{!}{.8em}{
\begin{tikzpicture}
\pgfpathmoveto{\pgfqpoint{0cm}{-0.035cm}}
\pgfpathlineto{\pgfqpoint{1.976cm}{-0.035cm}}
\pgfpathlineto{\pgfqpoint{1.976cm}{1.94cm}}
\pgfpathlineto{\pgfqpoint{0cm}{1.94cm}}
\pgfpathclose
\pgfusepath{clip}
\begin{pgfscope}
\begin{pgfscope}
\pgfpathmoveto{\pgfqpoint{0cm}{-0.035cm}}
\pgfpathlineto{\pgfqpoint{1.976cm}{-0.035cm}}
\pgfpathlineto{\pgfqpoint{1.976cm}{1.94cm}}
\pgfpathlineto{\pgfqpoint{0cm}{1.94cm}}
\pgfpathclose
\pgfusepath{clip}
\begin{pgfscope}
\begin{pgfscope}
\pgfsetdash{}{0cm}
\pgfsetlinewidth{0.818mm}
\pgfsetroundcap
\pgfsetroundjoin
\pgfsetmiterlimit{7.0}
\definecolor{eps2pgf_color}{gray}{0}\pgfsetstrokecolor{eps2pgf_color}\pgfsetfillcolor{eps2pgf_color}
\pgfpathmoveto{\pgfqpoint{0.117cm}{1.815cm}}
\pgfpathlineto{\pgfqpoint{0.682cm}{1.065cm}}
\pgfpathlineto{\pgfqpoint{1.246cm}{1.815cm}}
\pgfusepath{stroke}
\end{pgfscope}
\definecolor{eps2pgf_color}{gray}{0}\pgfsetstrokecolor{eps2pgf_color}\pgfsetfillcolor{eps2pgf_color}
\pgfpathmoveto{\pgfqpoint{0.273cm}{1.789cm}}
\pgfpathcurveto{\pgfqpoint{0.273cm}{1.825cm}}{\pgfqpoint{0.259cm}{1.86cm}}{\pgfqpoint{0.233cm}{1.886cm}}
\pgfpathcurveto{\pgfqpoint{0.207cm}{1.912cm}}{\pgfqpoint{0.173cm}{1.926cm}}{\pgfqpoint{0.137cm}{1.926cm}}
\pgfpathcurveto{\pgfqpoint{0.1cm}{1.926cm}}{\pgfqpoint{0.066cm}{1.912cm}}{\pgfqpoint{0.04cm}{1.886cm}}
\pgfpathcurveto{\pgfqpoint{0.014cm}{1.86cm}}{\pgfqpoint{0cm}{1.825cm}}{\pgfqpoint{0cm}{1.789cm}}
\pgfpathcurveto{\pgfqpoint{0cm}{1.753cm}}{\pgfqpoint{0.014cm}{1.718cm}}{\pgfqpoint{0.04cm}{1.692cm}}
\pgfpathcurveto{\pgfqpoint{0.066cm}{1.667cm}}{\pgfqpoint{0.1cm}{1.652cm}}{\pgfqpoint{0.137cm}{1.652cm}}
\pgfpathcurveto{\pgfqpoint{0.173cm}{1.652cm}}{\pgfqpoint{0.207cm}{1.667cm}}{\pgfqpoint{0.233cm}{1.692cm}}
\pgfpathcurveto{\pgfqpoint{0.259cm}{1.718cm}}{\pgfqpoint{0.273cm}{1.753cm}}{\pgfqpoint{0.273cm}{1.789cm}}
\pgfusepath{fill}
\begin{pgfscope}
\pgfsetdash{}{0cm}
\pgfsetlinewidth{0.818mm}
\pgfsetmiterlimit{7.0}
\pgfpathmoveto{\pgfqpoint{0.682cm}{1.065cm}}
\pgfpathlineto{\pgfqpoint{0.679cm}{1.812cm}}
\pgfusepath{stroke}
\end{pgfscope}
\pgfpathmoveto{\pgfqpoint{0.815cm}{1.793cm}}
\pgfpathcurveto{\pgfqpoint{0.815cm}{1.829cm}}{\pgfqpoint{0.801cm}{1.864cm}}{\pgfqpoint{0.775cm}{1.89cm}}
\pgfpathcurveto{\pgfqpoint{0.75cm}{1.915cm}}{\pgfqpoint{0.715cm}{1.93cm}}{\pgfqpoint{0.679cm}{1.93cm}}
\pgfpathcurveto{\pgfqpoint{0.643cm}{1.93cm}}{\pgfqpoint{0.608cm}{1.915cm}}{\pgfqpoint{0.582cm}{1.89cm}}
\pgfpathcurveto{\pgfqpoint{0.557cm}{1.864cm}}{\pgfqpoint{0.542cm}{1.829cm}}{\pgfqpoint{0.542cm}{1.793cm}}
\pgfpathcurveto{\pgfqpoint{0.542cm}{1.756cm}}{\pgfqpoint{0.557cm}{1.722cm}}{\pgfqpoint{0.582cm}{1.696cm}}
\pgfpathcurveto{\pgfqpoint{0.608cm}{1.67cm}}{\pgfqpoint{0.643cm}{1.656cm}}{\pgfqpoint{0.679cm}{1.656cm}}
\pgfpathcurveto{\pgfqpoint{0.715cm}{1.656cm}}{\pgfqpoint{0.75cm}{1.67cm}}{\pgfqpoint{0.775cm}{1.696cm}}
\pgfpathcurveto{\pgfqpoint{0.801cm}{1.722cm}}{\pgfqpoint{0.815cm}{1.756cm}}{\pgfqpoint{0.815cm}{1.793cm}}
\pgfusepath{fill}
\pgfpathmoveto{\pgfqpoint{1.345cm}{1.765cm}}
\pgfpathcurveto{\pgfqpoint{1.345cm}{1.801cm}}{\pgfqpoint{1.331cm}{1.836cm}}{\pgfqpoint{1.305cm}{1.862cm}}
\pgfpathcurveto{\pgfqpoint{1.28cm}{1.887cm}}{\pgfqpoint{1.245cm}{1.902cm}}{\pgfqpoint{1.209cm}{1.902cm}}
\pgfpathcurveto{\pgfqpoint{1.172cm}{1.902cm}}{\pgfqpoint{1.138cm}{1.887cm}}{\pgfqpoint{1.112cm}{1.862cm}}
\pgfpathcurveto{\pgfqpoint{1.087cm}{1.836cm}}{\pgfqpoint{1.072cm}{1.801cm}}{\pgfqpoint{1.072cm}{1.765cm}}
\pgfpathcurveto{\pgfqpoint{1.072cm}{1.728cm}}{\pgfqpoint{1.087cm}{1.694cm}}{\pgfqpoint{1.112cm}{1.668cm}}
\pgfpathcurveto{\pgfqpoint{1.138cm}{1.642cm}}{\pgfqpoint{1.172cm}{1.628cm}}{\pgfqpoint{1.209cm}{1.628cm}}
\pgfpathcurveto{\pgfqpoint{1.245cm}{1.628cm}}{\pgfqpoint{1.28cm}{1.642cm}}{\pgfqpoint{1.305cm}{1.668cm}}
\pgfpathcurveto{\pgfqpoint{1.331cm}{1.694cm}}{\pgfqpoint{1.345cm}{1.728cm}}{\pgfqpoint{1.345cm}{1.765cm}}
\pgfusepath{fill}
\begin{pgfscope}
\pgfsetdash{}{0cm}
\pgfsetlinewidth{0.818mm}
\pgfsetroundcap
\pgfsetroundjoin
\pgfsetmiterlimit{7.0}
\pgfpathmoveto{\pgfqpoint{0.682cm}{1.065cm}}
\pgfpathlineto{\pgfqpoint{1.246cm}{0.315cm}}
\pgfpathlineto{\pgfqpoint{1.811cm}{1.065cm}}
\pgfusepath{stroke}
\end{pgfscope}
\pgfpathmoveto{\pgfqpoint{1.948cm}{1.065cm}}
\pgfpathcurveto{\pgfqpoint{1.948cm}{1.101cm}}{\pgfqpoint{1.933cm}{1.136cm}}{\pgfqpoint{1.907cm}{1.162cm}}
\pgfpathcurveto{\pgfqpoint{1.882cm}{1.187cm}}{\pgfqpoint{1.847cm}{1.202cm}}{\pgfqpoint{1.811cm}{1.202cm}}
\pgfpathcurveto{\pgfqpoint{1.775cm}{1.202cm}}{\pgfqpoint{1.74cm}{1.187cm}}{\pgfqpoint{1.714cm}{1.162cm}}
\pgfpathcurveto{\pgfqpoint{1.689cm}{1.136cm}}{\pgfqpoint{1.674cm}{1.101cm}}{\pgfqpoint{1.674cm}{1.065cm}}
\pgfpathcurveto{\pgfqpoint{1.674cm}{1.029cm}}{\pgfqpoint{1.689cm}{0.994cm}}{\pgfqpoint{1.714cm}{0.968cm}}
\pgfpathcurveto{\pgfqpoint{1.74cm}{0.942cm}}{\pgfqpoint{1.775cm}{0.928cm}}{\pgfqpoint{1.811cm}{0.928cm}}
\pgfpathcurveto{\pgfqpoint{1.847cm}{0.928cm}}{\pgfqpoint{1.882cm}{0.942cm}}{\pgfqpoint{1.907cm}{0.968cm}}
\pgfpathcurveto{\pgfqpoint{1.933cm}{0.994cm}}{\pgfqpoint{1.948cm}{1.029cm}}{\pgfqpoint{1.948cm}{1.065cm}}
\pgfusepath{fill}
\begin{pgfscope}
\pgfsetdash{}{0cm}
\pgfsetlinewidth{0.818mm}
\pgfsetmiterlimit{7.0}
\pgfpathmoveto{\pgfqpoint{1.246cm}{0.315cm}}
\pgfpathlineto{\pgfqpoint{1.244cm}{1.061cm}}
\pgfusepath{stroke}
\end{pgfscope}
\pgfpathmoveto{\pgfqpoint{1.38cm}{1.065cm}}
\pgfpathcurveto{\pgfqpoint{1.38cm}{1.101cm}}{\pgfqpoint{1.366cm}{1.136cm}}{\pgfqpoint{1.34cm}{1.162cm}}
\pgfpathcurveto{\pgfqpoint{1.315cm}{1.187cm}}{\pgfqpoint{1.28cm}{1.202cm}}{\pgfqpoint{1.244cm}{1.202cm}}
\pgfpathcurveto{\pgfqpoint{1.207cm}{1.202cm}}{\pgfqpoint{1.173cm}{1.187cm}}{\pgfqpoint{1.147cm}{1.162cm}}
\pgfpathcurveto{\pgfqpoint{1.121cm}{1.136cm}}{\pgfqpoint{1.107cm}{1.101cm}}{\pgfqpoint{1.107cm}{1.065cm}}
\pgfpathcurveto{\pgfqpoint{1.107cm}{1.029cm}}{\pgfqpoint{1.121cm}{0.994cm}}{\pgfqpoint{1.147cm}{0.968cm}}
\pgfpathcurveto{\pgfqpoint{1.173cm}{0.942cm}}{\pgfqpoint{1.207cm}{0.928cm}}{\pgfqpoint{1.244cm}{0.928cm}}
\pgfpathcurveto{\pgfqpoint{1.28cm}{0.928cm}}{\pgfqpoint{1.315cm}{0.942cm}}{\pgfqpoint{1.34cm}{0.968cm}}
\pgfpathcurveto{\pgfqpoint{1.366cm}{0.994cm}}{\pgfqpoint{1.38cm}{1.029cm}}{\pgfqpoint{1.38cm}{1.065cm}}
\pgfusepath{fill}
\begin{pgfscope}
\pgfsetdash{}{0cm}
\pgfsetlinewidth{0.818mm}
\pgfsetmiterlimit{4.0}
\pgfpathmoveto{\pgfqpoint{1.383cm}{0.178cm}}
\pgfpathcurveto{\pgfqpoint{1.383cm}{0.214cm}}{\pgfqpoint{1.369cm}{0.249cm}}{\pgfqpoint{1.343cm}{0.275cm}}
\pgfpathcurveto{\pgfqpoint{1.317cm}{0.3cm}}{\pgfqpoint{1.283cm}{0.315cm}}{\pgfqpoint{1.246cm}{0.315cm}}
\pgfpathcurveto{\pgfqpoint{1.21cm}{0.315cm}}{\pgfqpoint{1.175cm}{0.3cm}}{\pgfqpoint{1.15cm}{0.275cm}}
\pgfpathcurveto{\pgfqpoint{1.124cm}{0.249cm}}{\pgfqpoint{1.11cm}{0.214cm}}{\pgfqpoint{1.11cm}{0.178cm}}
\pgfpathcurveto{\pgfqpoint{1.11cm}{0.141cm}}{\pgfqpoint{1.124cm}{0.107cm}}{\pgfqpoint{1.15cm}{0.081cm}}
\pgfpathcurveto{\pgfqpoint{1.175cm}{0.055cm}}{\pgfqpoint{1.21cm}{0.041cm}}{\pgfqpoint{1.246cm}{0.041cm}}
\pgfpathcurveto{\pgfqpoint{1.283cm}{0.041cm}}{\pgfqpoint{1.317cm}{0.055cm}}{\pgfqpoint{1.343cm}{0.081cm}}
\pgfpathcurveto{\pgfqpoint{1.369cm}{0.107cm}}{\pgfqpoint{1.383cm}{0.141cm}}{\pgfqpoint{1.383cm}{0.178cm}}
\pgfusepath{stroke}
\end{pgfscope}
\end{pgfscope}
\end{pgfscope}
\end{pgfscope}
\end{tikzpicture}}}
     \|_{C_T \mathscr{C} \hspace{.1em}^{- 1 / 2 - \kappa, \varepsilon}} \]
  \[ \lesssim \lambda^2 \| \rho \phi_{\varepsilon} \|_{L^{4,
     \varepsilon}}^{\theta} \| \rho^2 \phi_{\varepsilon} \|_{H^{1 - 2 \kappa,
     \varepsilon}}^{1 - \theta} \| \mathbb{X}_{\varepsilon} \|^5 \leqslant
     \lambda^{(8 - \theta) / (2 + \theta)} C_{\delta} \|
     \mathbb{X}_{\varepsilon} \|^8 + \delta \Upsilon_{\varepsilon} \]
  \[ \leqslant (\lambda^3 + \lambda^4) C_{\delta} \| \mathbb{X}_{\varepsilon}
     \|^8 + \delta \Upsilon_{\varepsilon}. \]
  Next, we have
  \[ \lambda^2 | \langle \rho^4 \phi_{\varepsilon},
     \tilde{X}_{\varepsilon}^{\!\resizebox{!}{.8em}{
\begin{tikzpicture}
\pgfpathmoveto{\pgfqpoint{0cm}{-0.035cm}}
\pgfpathlineto{\pgfqpoint{1.976cm}{-0.035cm}}
\pgfpathlineto{\pgfqpoint{1.976cm}{1.94cm}}
\pgfpathlineto{\pgfqpoint{0cm}{1.94cm}}
\pgfpathclose
\pgfusepath{clip}
\begin{pgfscope}
\begin{pgfscope}
\pgfpathmoveto{\pgfqpoint{0cm}{-0.035cm}}
\pgfpathlineto{\pgfqpoint{1.976cm}{-0.035cm}}
\pgfpathlineto{\pgfqpoint{1.976cm}{1.94cm}}
\pgfpathlineto{\pgfqpoint{0cm}{1.94cm}}
\pgfpathclose
\pgfusepath{clip}
\begin{pgfscope}
\begin{pgfscope}
\pgfsetdash{}{0cm}
\pgfsetlinewidth{0.818mm}
\pgfsetroundcap
\pgfsetroundjoin
\pgfsetmiterlimit{7.0}
\definecolor{eps2pgf_color}{gray}{0}\pgfsetstrokecolor{eps2pgf_color}\pgfsetfillcolor{eps2pgf_color}
\pgfpathmoveto{\pgfqpoint{0.117cm}{1.815cm}}
\pgfpathlineto{\pgfqpoint{0.682cm}{1.065cm}}
\pgfpathlineto{\pgfqpoint{1.246cm}{1.815cm}}
\pgfusepath{stroke}
\end{pgfscope}
\definecolor{eps2pgf_color}{gray}{0}\pgfsetstrokecolor{eps2pgf_color}\pgfsetfillcolor{eps2pgf_color}
\pgfpathmoveto{\pgfqpoint{0.273cm}{1.789cm}}
\pgfpathcurveto{\pgfqpoint{0.273cm}{1.825cm}}{\pgfqpoint{0.259cm}{1.86cm}}{\pgfqpoint{0.233cm}{1.886cm}}
\pgfpathcurveto{\pgfqpoint{0.207cm}{1.912cm}}{\pgfqpoint{0.173cm}{1.926cm}}{\pgfqpoint{0.137cm}{1.926cm}}
\pgfpathcurveto{\pgfqpoint{0.1cm}{1.926cm}}{\pgfqpoint{0.066cm}{1.912cm}}{\pgfqpoint{0.04cm}{1.886cm}}
\pgfpathcurveto{\pgfqpoint{0.014cm}{1.86cm}}{\pgfqpoint{0cm}{1.825cm}}{\pgfqpoint{0cm}{1.789cm}}
\pgfpathcurveto{\pgfqpoint{0cm}{1.753cm}}{\pgfqpoint{0.014cm}{1.718cm}}{\pgfqpoint{0.04cm}{1.692cm}}
\pgfpathcurveto{\pgfqpoint{0.066cm}{1.667cm}}{\pgfqpoint{0.1cm}{1.652cm}}{\pgfqpoint{0.137cm}{1.652cm}}
\pgfpathcurveto{\pgfqpoint{0.173cm}{1.652cm}}{\pgfqpoint{0.207cm}{1.667cm}}{\pgfqpoint{0.233cm}{1.692cm}}
\pgfpathcurveto{\pgfqpoint{0.259cm}{1.718cm}}{\pgfqpoint{0.273cm}{1.753cm}}{\pgfqpoint{0.273cm}{1.789cm}}
\pgfusepath{fill}
\pgfpathmoveto{\pgfqpoint{1.345cm}{1.765cm}}
\pgfpathcurveto{\pgfqpoint{1.345cm}{1.801cm}}{\pgfqpoint{1.331cm}{1.836cm}}{\pgfqpoint{1.305cm}{1.862cm}}
\pgfpathcurveto{\pgfqpoint{1.28cm}{1.887cm}}{\pgfqpoint{1.245cm}{1.902cm}}{\pgfqpoint{1.209cm}{1.902cm}}
\pgfpathcurveto{\pgfqpoint{1.172cm}{1.902cm}}{\pgfqpoint{1.138cm}{1.887cm}}{\pgfqpoint{1.112cm}{1.862cm}}
\pgfpathcurveto{\pgfqpoint{1.087cm}{1.836cm}}{\pgfqpoint{1.072cm}{1.801cm}}{\pgfqpoint{1.072cm}{1.765cm}}
\pgfpathcurveto{\pgfqpoint{1.072cm}{1.728cm}}{\pgfqpoint{1.087cm}{1.694cm}}{\pgfqpoint{1.112cm}{1.668cm}}
\pgfpathcurveto{\pgfqpoint{1.138cm}{1.642cm}}{\pgfqpoint{1.172cm}{1.628cm}}{\pgfqpoint{1.209cm}{1.628cm}}
\pgfpathcurveto{\pgfqpoint{1.245cm}{1.628cm}}{\pgfqpoint{1.28cm}{1.642cm}}{\pgfqpoint{1.305cm}{1.668cm}}
\pgfpathcurveto{\pgfqpoint{1.331cm}{1.694cm}}{\pgfqpoint{1.345cm}{1.728cm}}{\pgfqpoint{1.345cm}{1.765cm}}
\pgfusepath{fill}
\begin{pgfscope}
\pgfsetdash{}{0cm}
\pgfsetlinewidth{0.818mm}
\pgfsetroundcap
\pgfsetroundjoin
\pgfsetmiterlimit{7.0}
\pgfpathmoveto{\pgfqpoint{0.682cm}{1.065cm}}
\pgfpathlineto{\pgfqpoint{1.246cm}{0.315cm}}
\pgfpathlineto{\pgfqpoint{1.811cm}{1.065cm}}
\pgfusepath{stroke}
\end{pgfscope}
\pgfpathmoveto{\pgfqpoint{1.948cm}{1.065cm}}
\pgfpathcurveto{\pgfqpoint{1.948cm}{1.101cm}}{\pgfqpoint{1.933cm}{1.136cm}}{\pgfqpoint{1.907cm}{1.162cm}}
\pgfpathcurveto{\pgfqpoint{1.882cm}{1.187cm}}{\pgfqpoint{1.847cm}{1.202cm}}{\pgfqpoint{1.811cm}{1.202cm}}
\pgfpathcurveto{\pgfqpoint{1.775cm}{1.202cm}}{\pgfqpoint{1.74cm}{1.187cm}}{\pgfqpoint{1.714cm}{1.162cm}}
\pgfpathcurveto{\pgfqpoint{1.689cm}{1.136cm}}{\pgfqpoint{1.674cm}{1.101cm}}{\pgfqpoint{1.674cm}{1.065cm}}
\pgfpathcurveto{\pgfqpoint{1.674cm}{1.029cm}}{\pgfqpoint{1.689cm}{0.994cm}}{\pgfqpoint{1.714cm}{0.968cm}}
\pgfpathcurveto{\pgfqpoint{1.74cm}{0.942cm}}{\pgfqpoint{1.775cm}{0.928cm}}{\pgfqpoint{1.811cm}{0.928cm}}
\pgfpathcurveto{\pgfqpoint{1.847cm}{0.928cm}}{\pgfqpoint{1.882cm}{0.942cm}}{\pgfqpoint{1.907cm}{0.968cm}}
\pgfpathcurveto{\pgfqpoint{1.933cm}{0.994cm}}{\pgfqpoint{1.948cm}{1.029cm}}{\pgfqpoint{1.948cm}{1.065cm}}
\pgfusepath{fill}
\begin{pgfscope}
\pgfsetdash{}{0cm}
\pgfsetlinewidth{0.818mm}
\pgfsetmiterlimit{7.0}
\pgfpathmoveto{\pgfqpoint{1.246cm}{0.315cm}}
\pgfpathlineto{\pgfqpoint{1.244cm}{1.061cm}}
\pgfusepath{stroke}
\end{pgfscope}
\pgfpathmoveto{\pgfqpoint{1.38cm}{1.065cm}}
\pgfpathcurveto{\pgfqpoint{1.38cm}{1.101cm}}{\pgfqpoint{1.366cm}{1.136cm}}{\pgfqpoint{1.34cm}{1.162cm}}
\pgfpathcurveto{\pgfqpoint{1.315cm}{1.187cm}}{\pgfqpoint{1.28cm}{1.202cm}}{\pgfqpoint{1.244cm}{1.202cm}}
\pgfpathcurveto{\pgfqpoint{1.207cm}{1.202cm}}{\pgfqpoint{1.173cm}{1.187cm}}{\pgfqpoint{1.147cm}{1.162cm}}
\pgfpathcurveto{\pgfqpoint{1.121cm}{1.136cm}}{\pgfqpoint{1.107cm}{1.101cm}}{\pgfqpoint{1.107cm}{1.065cm}}
\pgfpathcurveto{\pgfqpoint{1.107cm}{1.029cm}}{\pgfqpoint{1.121cm}{0.994cm}}{\pgfqpoint{1.147cm}{0.968cm}}
\pgfpathcurveto{\pgfqpoint{1.173cm}{0.942cm}}{\pgfqpoint{1.207cm}{0.928cm}}{\pgfqpoint{1.244cm}{0.928cm}}
\pgfpathcurveto{\pgfqpoint{1.28cm}{0.928cm}}{\pgfqpoint{1.315cm}{0.942cm}}{\pgfqpoint{1.34cm}{0.968cm}}
\pgfpathcurveto{\pgfqpoint{1.366cm}{0.994cm}}{\pgfqpoint{1.38cm}{1.029cm}}{\pgfqpoint{1.38cm}{1.065cm}}
\pgfusepath{fill}
\begin{pgfscope}
\pgfsetdash{}{0cm}
\pgfsetlinewidth{0.818mm}
\pgfsetmiterlimit{4.0}
\pgfpathmoveto{\pgfqpoint{1.383cm}{0.178cm}}
\pgfpathcurveto{\pgfqpoint{1.383cm}{0.214cm}}{\pgfqpoint{1.369cm}{0.249cm}}{\pgfqpoint{1.343cm}{0.275cm}}
\pgfpathcurveto{\pgfqpoint{1.317cm}{0.3cm}}{\pgfqpoint{1.283cm}{0.315cm}}{\pgfqpoint{1.246cm}{0.315cm}}
\pgfpathcurveto{\pgfqpoint{1.21cm}{0.315cm}}{\pgfqpoint{1.175cm}{0.3cm}}{\pgfqpoint{1.15cm}{0.275cm}}
\pgfpathcurveto{\pgfqpoint{1.124cm}{0.249cm}}{\pgfqpoint{1.11cm}{0.214cm}}{\pgfqpoint{1.11cm}{0.178cm}}
\pgfpathcurveto{\pgfqpoint{1.11cm}{0.141cm}}{\pgfqpoint{1.124cm}{0.107cm}}{\pgfqpoint{1.15cm}{0.081cm}}
\pgfpathcurveto{\pgfqpoint{1.175cm}{0.055cm}}{\pgfqpoint{1.21cm}{0.041cm}}{\pgfqpoint{1.246cm}{0.041cm}}
\pgfpathcurveto{\pgfqpoint{1.283cm}{0.041cm}}{\pgfqpoint{1.317cm}{0.055cm}}{\pgfqpoint{1.343cm}{0.081cm}}
\pgfpathcurveto{\pgfqpoint{1.369cm}{0.107cm}}{\pgfqpoint{1.383cm}{0.141cm}}{\pgfqpoint{1.383cm}{0.178cm}}
\pgfusepath{stroke}
\end{pgfscope}
\end{pgfscope}
\end{pgfscope}
\end{pgfscope}
\end{tikzpicture}}} Y \rangle_{\varepsilon} | \leqslant
     \lambda^2 | \langle \rho^4 \phi_{\varepsilon},
     \tilde{X}_{\varepsilon}^{\!\resizebox{!}{.8em}{
\begin{tikzpicture}
\pgfpathmoveto{\pgfqpoint{0cm}{-0.035cm}}
\pgfpathlineto{\pgfqpoint{1.976cm}{-0.035cm}}
\pgfpathlineto{\pgfqpoint{1.976cm}{1.94cm}}
\pgfpathlineto{\pgfqpoint{0cm}{1.94cm}}
\pgfpathclose
\pgfusepath{clip}
\begin{pgfscope}
\begin{pgfscope}
\pgfpathmoveto{\pgfqpoint{0cm}{-0.035cm}}
\pgfpathlineto{\pgfqpoint{1.976cm}{-0.035cm}}
\pgfpathlineto{\pgfqpoint{1.976cm}{1.94cm}}
\pgfpathlineto{\pgfqpoint{0cm}{1.94cm}}
\pgfpathclose
\pgfusepath{clip}
\begin{pgfscope}
\begin{pgfscope}
\pgfsetdash{}{0cm}
\pgfsetlinewidth{0.818mm}
\pgfsetroundcap
\pgfsetroundjoin
\pgfsetmiterlimit{7.0}
\definecolor{eps2pgf_color}{gray}{0}\pgfsetstrokecolor{eps2pgf_color}\pgfsetfillcolor{eps2pgf_color}
\pgfpathmoveto{\pgfqpoint{0.117cm}{1.815cm}}
\pgfpathlineto{\pgfqpoint{0.682cm}{1.065cm}}
\pgfpathlineto{\pgfqpoint{1.246cm}{1.815cm}}
\pgfusepath{stroke}
\end{pgfscope}
\definecolor{eps2pgf_color}{gray}{0}\pgfsetstrokecolor{eps2pgf_color}\pgfsetfillcolor{eps2pgf_color}
\pgfpathmoveto{\pgfqpoint{0.273cm}{1.789cm}}
\pgfpathcurveto{\pgfqpoint{0.273cm}{1.825cm}}{\pgfqpoint{0.259cm}{1.86cm}}{\pgfqpoint{0.233cm}{1.886cm}}
\pgfpathcurveto{\pgfqpoint{0.207cm}{1.912cm}}{\pgfqpoint{0.173cm}{1.926cm}}{\pgfqpoint{0.137cm}{1.926cm}}
\pgfpathcurveto{\pgfqpoint{0.1cm}{1.926cm}}{\pgfqpoint{0.066cm}{1.912cm}}{\pgfqpoint{0.04cm}{1.886cm}}
\pgfpathcurveto{\pgfqpoint{0.014cm}{1.86cm}}{\pgfqpoint{0cm}{1.825cm}}{\pgfqpoint{0cm}{1.789cm}}
\pgfpathcurveto{\pgfqpoint{0cm}{1.753cm}}{\pgfqpoint{0.014cm}{1.718cm}}{\pgfqpoint{0.04cm}{1.692cm}}
\pgfpathcurveto{\pgfqpoint{0.066cm}{1.667cm}}{\pgfqpoint{0.1cm}{1.652cm}}{\pgfqpoint{0.137cm}{1.652cm}}
\pgfpathcurveto{\pgfqpoint{0.173cm}{1.652cm}}{\pgfqpoint{0.207cm}{1.667cm}}{\pgfqpoint{0.233cm}{1.692cm}}
\pgfpathcurveto{\pgfqpoint{0.259cm}{1.718cm}}{\pgfqpoint{0.273cm}{1.753cm}}{\pgfqpoint{0.273cm}{1.789cm}}
\pgfusepath{fill}
\pgfpathmoveto{\pgfqpoint{1.345cm}{1.765cm}}
\pgfpathcurveto{\pgfqpoint{1.345cm}{1.801cm}}{\pgfqpoint{1.331cm}{1.836cm}}{\pgfqpoint{1.305cm}{1.862cm}}
\pgfpathcurveto{\pgfqpoint{1.28cm}{1.887cm}}{\pgfqpoint{1.245cm}{1.902cm}}{\pgfqpoint{1.209cm}{1.902cm}}
\pgfpathcurveto{\pgfqpoint{1.172cm}{1.902cm}}{\pgfqpoint{1.138cm}{1.887cm}}{\pgfqpoint{1.112cm}{1.862cm}}
\pgfpathcurveto{\pgfqpoint{1.087cm}{1.836cm}}{\pgfqpoint{1.072cm}{1.801cm}}{\pgfqpoint{1.072cm}{1.765cm}}
\pgfpathcurveto{\pgfqpoint{1.072cm}{1.728cm}}{\pgfqpoint{1.087cm}{1.694cm}}{\pgfqpoint{1.112cm}{1.668cm}}
\pgfpathcurveto{\pgfqpoint{1.138cm}{1.642cm}}{\pgfqpoint{1.172cm}{1.628cm}}{\pgfqpoint{1.209cm}{1.628cm}}
\pgfpathcurveto{\pgfqpoint{1.245cm}{1.628cm}}{\pgfqpoint{1.28cm}{1.642cm}}{\pgfqpoint{1.305cm}{1.668cm}}
\pgfpathcurveto{\pgfqpoint{1.331cm}{1.694cm}}{\pgfqpoint{1.345cm}{1.728cm}}{\pgfqpoint{1.345cm}{1.765cm}}
\pgfusepath{fill}
\begin{pgfscope}
\pgfsetdash{}{0cm}
\pgfsetlinewidth{0.818mm}
\pgfsetroundcap
\pgfsetroundjoin
\pgfsetmiterlimit{7.0}
\pgfpathmoveto{\pgfqpoint{0.682cm}{1.065cm}}
\pgfpathlineto{\pgfqpoint{1.246cm}{0.315cm}}
\pgfpathlineto{\pgfqpoint{1.811cm}{1.065cm}}
\pgfusepath{stroke}
\end{pgfscope}
\pgfpathmoveto{\pgfqpoint{1.948cm}{1.065cm}}
\pgfpathcurveto{\pgfqpoint{1.948cm}{1.101cm}}{\pgfqpoint{1.933cm}{1.136cm}}{\pgfqpoint{1.907cm}{1.162cm}}
\pgfpathcurveto{\pgfqpoint{1.882cm}{1.187cm}}{\pgfqpoint{1.847cm}{1.202cm}}{\pgfqpoint{1.811cm}{1.202cm}}
\pgfpathcurveto{\pgfqpoint{1.775cm}{1.202cm}}{\pgfqpoint{1.74cm}{1.187cm}}{\pgfqpoint{1.714cm}{1.162cm}}
\pgfpathcurveto{\pgfqpoint{1.689cm}{1.136cm}}{\pgfqpoint{1.674cm}{1.101cm}}{\pgfqpoint{1.674cm}{1.065cm}}
\pgfpathcurveto{\pgfqpoint{1.674cm}{1.029cm}}{\pgfqpoint{1.689cm}{0.994cm}}{\pgfqpoint{1.714cm}{0.968cm}}
\pgfpathcurveto{\pgfqpoint{1.74cm}{0.942cm}}{\pgfqpoint{1.775cm}{0.928cm}}{\pgfqpoint{1.811cm}{0.928cm}}
\pgfpathcurveto{\pgfqpoint{1.847cm}{0.928cm}}{\pgfqpoint{1.882cm}{0.942cm}}{\pgfqpoint{1.907cm}{0.968cm}}
\pgfpathcurveto{\pgfqpoint{1.933cm}{0.994cm}}{\pgfqpoint{1.948cm}{1.029cm}}{\pgfqpoint{1.948cm}{1.065cm}}
\pgfusepath{fill}
\begin{pgfscope}
\pgfsetdash{}{0cm}
\pgfsetlinewidth{0.818mm}
\pgfsetmiterlimit{7.0}
\pgfpathmoveto{\pgfqpoint{1.246cm}{0.315cm}}
\pgfpathlineto{\pgfqpoint{1.244cm}{1.061cm}}
\pgfusepath{stroke}
\end{pgfscope}
\pgfpathmoveto{\pgfqpoint{1.38cm}{1.065cm}}
\pgfpathcurveto{\pgfqpoint{1.38cm}{1.101cm}}{\pgfqpoint{1.366cm}{1.136cm}}{\pgfqpoint{1.34cm}{1.162cm}}
\pgfpathcurveto{\pgfqpoint{1.315cm}{1.187cm}}{\pgfqpoint{1.28cm}{1.202cm}}{\pgfqpoint{1.244cm}{1.202cm}}
\pgfpathcurveto{\pgfqpoint{1.207cm}{1.202cm}}{\pgfqpoint{1.173cm}{1.187cm}}{\pgfqpoint{1.147cm}{1.162cm}}
\pgfpathcurveto{\pgfqpoint{1.121cm}{1.136cm}}{\pgfqpoint{1.107cm}{1.101cm}}{\pgfqpoint{1.107cm}{1.065cm}}
\pgfpathcurveto{\pgfqpoint{1.107cm}{1.029cm}}{\pgfqpoint{1.121cm}{0.994cm}}{\pgfqpoint{1.147cm}{0.968cm}}
\pgfpathcurveto{\pgfqpoint{1.173cm}{0.942cm}}{\pgfqpoint{1.207cm}{0.928cm}}{\pgfqpoint{1.244cm}{0.928cm}}
\pgfpathcurveto{\pgfqpoint{1.28cm}{0.928cm}}{\pgfqpoint{1.315cm}{0.942cm}}{\pgfqpoint{1.34cm}{0.968cm}}
\pgfpathcurveto{\pgfqpoint{1.366cm}{0.994cm}}{\pgfqpoint{1.38cm}{1.029cm}}{\pgfqpoint{1.38cm}{1.065cm}}
\pgfusepath{fill}
\begin{pgfscope}
\pgfsetdash{}{0cm}
\pgfsetlinewidth{0.818mm}
\pgfsetmiterlimit{4.0}
\pgfpathmoveto{\pgfqpoint{1.383cm}{0.178cm}}
\pgfpathcurveto{\pgfqpoint{1.383cm}{0.214cm}}{\pgfqpoint{1.369cm}{0.249cm}}{\pgfqpoint{1.343cm}{0.275cm}}
\pgfpathcurveto{\pgfqpoint{1.317cm}{0.3cm}}{\pgfqpoint{1.283cm}{0.315cm}}{\pgfqpoint{1.246cm}{0.315cm}}
\pgfpathcurveto{\pgfqpoint{1.21cm}{0.315cm}}{\pgfqpoint{1.175cm}{0.3cm}}{\pgfqpoint{1.15cm}{0.275cm}}
\pgfpathcurveto{\pgfqpoint{1.124cm}{0.249cm}}{\pgfqpoint{1.11cm}{0.214cm}}{\pgfqpoint{1.11cm}{0.178cm}}
\pgfpathcurveto{\pgfqpoint{1.11cm}{0.141cm}}{\pgfqpoint{1.124cm}{0.107cm}}{\pgfqpoint{1.15cm}{0.081cm}}
\pgfpathcurveto{\pgfqpoint{1.175cm}{0.055cm}}{\pgfqpoint{1.21cm}{0.041cm}}{\pgfqpoint{1.246cm}{0.041cm}}
\pgfpathcurveto{\pgfqpoint{1.283cm}{0.041cm}}{\pgfqpoint{1.317cm}{0.055cm}}{\pgfqpoint{1.343cm}{0.081cm}}
\pgfpathcurveto{\pgfqpoint{1.369cm}{0.107cm}}{\pgfqpoint{1.383cm}{0.141cm}}{\pgfqpoint{1.383cm}{0.178cm}}
\pgfusepath{stroke}
\end{pgfscope}
\end{pgfscope}
\end{pgfscope}
\end{pgfscope}
\end{tikzpicture}}} \Join Y \rangle_{\varepsilon} | +
     \lambda^2 | \langle \rho^4 \phi_{\varepsilon},
     \tilde{X}_{\varepsilon}^{\!\resizebox{!}{.8em}{
\begin{tikzpicture}
\pgfpathmoveto{\pgfqpoint{0cm}{-0.035cm}}
\pgfpathlineto{\pgfqpoint{1.976cm}{-0.035cm}}
\pgfpathlineto{\pgfqpoint{1.976cm}{1.94cm}}
\pgfpathlineto{\pgfqpoint{0cm}{1.94cm}}
\pgfpathclose
\pgfusepath{clip}
\begin{pgfscope}
\begin{pgfscope}
\pgfpathmoveto{\pgfqpoint{0cm}{-0.035cm}}
\pgfpathlineto{\pgfqpoint{1.976cm}{-0.035cm}}
\pgfpathlineto{\pgfqpoint{1.976cm}{1.94cm}}
\pgfpathlineto{\pgfqpoint{0cm}{1.94cm}}
\pgfpathclose
\pgfusepath{clip}
\begin{pgfscope}
\begin{pgfscope}
\pgfsetdash{}{0cm}
\pgfsetlinewidth{0.818mm}
\pgfsetroundcap
\pgfsetroundjoin
\pgfsetmiterlimit{7.0}
\definecolor{eps2pgf_color}{gray}{0}\pgfsetstrokecolor{eps2pgf_color}\pgfsetfillcolor{eps2pgf_color}
\pgfpathmoveto{\pgfqpoint{0.117cm}{1.815cm}}
\pgfpathlineto{\pgfqpoint{0.682cm}{1.065cm}}
\pgfpathlineto{\pgfqpoint{1.246cm}{1.815cm}}
\pgfusepath{stroke}
\end{pgfscope}
\definecolor{eps2pgf_color}{gray}{0}\pgfsetstrokecolor{eps2pgf_color}\pgfsetfillcolor{eps2pgf_color}
\pgfpathmoveto{\pgfqpoint{0.273cm}{1.789cm}}
\pgfpathcurveto{\pgfqpoint{0.273cm}{1.825cm}}{\pgfqpoint{0.259cm}{1.86cm}}{\pgfqpoint{0.233cm}{1.886cm}}
\pgfpathcurveto{\pgfqpoint{0.207cm}{1.912cm}}{\pgfqpoint{0.173cm}{1.926cm}}{\pgfqpoint{0.137cm}{1.926cm}}
\pgfpathcurveto{\pgfqpoint{0.1cm}{1.926cm}}{\pgfqpoint{0.066cm}{1.912cm}}{\pgfqpoint{0.04cm}{1.886cm}}
\pgfpathcurveto{\pgfqpoint{0.014cm}{1.86cm}}{\pgfqpoint{0cm}{1.825cm}}{\pgfqpoint{0cm}{1.789cm}}
\pgfpathcurveto{\pgfqpoint{0cm}{1.753cm}}{\pgfqpoint{0.014cm}{1.718cm}}{\pgfqpoint{0.04cm}{1.692cm}}
\pgfpathcurveto{\pgfqpoint{0.066cm}{1.667cm}}{\pgfqpoint{0.1cm}{1.652cm}}{\pgfqpoint{0.137cm}{1.652cm}}
\pgfpathcurveto{\pgfqpoint{0.173cm}{1.652cm}}{\pgfqpoint{0.207cm}{1.667cm}}{\pgfqpoint{0.233cm}{1.692cm}}
\pgfpathcurveto{\pgfqpoint{0.259cm}{1.718cm}}{\pgfqpoint{0.273cm}{1.753cm}}{\pgfqpoint{0.273cm}{1.789cm}}
\pgfusepath{fill}
\pgfpathmoveto{\pgfqpoint{1.345cm}{1.765cm}}
\pgfpathcurveto{\pgfqpoint{1.345cm}{1.801cm}}{\pgfqpoint{1.331cm}{1.836cm}}{\pgfqpoint{1.305cm}{1.862cm}}
\pgfpathcurveto{\pgfqpoint{1.28cm}{1.887cm}}{\pgfqpoint{1.245cm}{1.902cm}}{\pgfqpoint{1.209cm}{1.902cm}}
\pgfpathcurveto{\pgfqpoint{1.172cm}{1.902cm}}{\pgfqpoint{1.138cm}{1.887cm}}{\pgfqpoint{1.112cm}{1.862cm}}
\pgfpathcurveto{\pgfqpoint{1.087cm}{1.836cm}}{\pgfqpoint{1.072cm}{1.801cm}}{\pgfqpoint{1.072cm}{1.765cm}}
\pgfpathcurveto{\pgfqpoint{1.072cm}{1.728cm}}{\pgfqpoint{1.087cm}{1.694cm}}{\pgfqpoint{1.112cm}{1.668cm}}
\pgfpathcurveto{\pgfqpoint{1.138cm}{1.642cm}}{\pgfqpoint{1.172cm}{1.628cm}}{\pgfqpoint{1.209cm}{1.628cm}}
\pgfpathcurveto{\pgfqpoint{1.245cm}{1.628cm}}{\pgfqpoint{1.28cm}{1.642cm}}{\pgfqpoint{1.305cm}{1.668cm}}
\pgfpathcurveto{\pgfqpoint{1.331cm}{1.694cm}}{\pgfqpoint{1.345cm}{1.728cm}}{\pgfqpoint{1.345cm}{1.765cm}}
\pgfusepath{fill}
\begin{pgfscope}
\pgfsetdash{}{0cm}
\pgfsetlinewidth{0.818mm}
\pgfsetroundcap
\pgfsetroundjoin
\pgfsetmiterlimit{7.0}
\pgfpathmoveto{\pgfqpoint{0.682cm}{1.065cm}}
\pgfpathlineto{\pgfqpoint{1.246cm}{0.315cm}}
\pgfpathlineto{\pgfqpoint{1.811cm}{1.065cm}}
\pgfusepath{stroke}
\end{pgfscope}
\pgfpathmoveto{\pgfqpoint{1.948cm}{1.065cm}}
\pgfpathcurveto{\pgfqpoint{1.948cm}{1.101cm}}{\pgfqpoint{1.933cm}{1.136cm}}{\pgfqpoint{1.907cm}{1.162cm}}
\pgfpathcurveto{\pgfqpoint{1.882cm}{1.187cm}}{\pgfqpoint{1.847cm}{1.202cm}}{\pgfqpoint{1.811cm}{1.202cm}}
\pgfpathcurveto{\pgfqpoint{1.775cm}{1.202cm}}{\pgfqpoint{1.74cm}{1.187cm}}{\pgfqpoint{1.714cm}{1.162cm}}
\pgfpathcurveto{\pgfqpoint{1.689cm}{1.136cm}}{\pgfqpoint{1.674cm}{1.101cm}}{\pgfqpoint{1.674cm}{1.065cm}}
\pgfpathcurveto{\pgfqpoint{1.674cm}{1.029cm}}{\pgfqpoint{1.689cm}{0.994cm}}{\pgfqpoint{1.714cm}{0.968cm}}
\pgfpathcurveto{\pgfqpoint{1.74cm}{0.942cm}}{\pgfqpoint{1.775cm}{0.928cm}}{\pgfqpoint{1.811cm}{0.928cm}}
\pgfpathcurveto{\pgfqpoint{1.847cm}{0.928cm}}{\pgfqpoint{1.882cm}{0.942cm}}{\pgfqpoint{1.907cm}{0.968cm}}
\pgfpathcurveto{\pgfqpoint{1.933cm}{0.994cm}}{\pgfqpoint{1.948cm}{1.029cm}}{\pgfqpoint{1.948cm}{1.065cm}}
\pgfusepath{fill}
\begin{pgfscope}
\pgfsetdash{}{0cm}
\pgfsetlinewidth{0.818mm}
\pgfsetmiterlimit{7.0}
\pgfpathmoveto{\pgfqpoint{1.246cm}{0.315cm}}
\pgfpathlineto{\pgfqpoint{1.244cm}{1.061cm}}
\pgfusepath{stroke}
\end{pgfscope}
\pgfpathmoveto{\pgfqpoint{1.38cm}{1.065cm}}
\pgfpathcurveto{\pgfqpoint{1.38cm}{1.101cm}}{\pgfqpoint{1.366cm}{1.136cm}}{\pgfqpoint{1.34cm}{1.162cm}}
\pgfpathcurveto{\pgfqpoint{1.315cm}{1.187cm}}{\pgfqpoint{1.28cm}{1.202cm}}{\pgfqpoint{1.244cm}{1.202cm}}
\pgfpathcurveto{\pgfqpoint{1.207cm}{1.202cm}}{\pgfqpoint{1.173cm}{1.187cm}}{\pgfqpoint{1.147cm}{1.162cm}}
\pgfpathcurveto{\pgfqpoint{1.121cm}{1.136cm}}{\pgfqpoint{1.107cm}{1.101cm}}{\pgfqpoint{1.107cm}{1.065cm}}
\pgfpathcurveto{\pgfqpoint{1.107cm}{1.029cm}}{\pgfqpoint{1.121cm}{0.994cm}}{\pgfqpoint{1.147cm}{0.968cm}}
\pgfpathcurveto{\pgfqpoint{1.173cm}{0.942cm}}{\pgfqpoint{1.207cm}{0.928cm}}{\pgfqpoint{1.244cm}{0.928cm}}
\pgfpathcurveto{\pgfqpoint{1.28cm}{0.928cm}}{\pgfqpoint{1.315cm}{0.942cm}}{\pgfqpoint{1.34cm}{0.968cm}}
\pgfpathcurveto{\pgfqpoint{1.366cm}{0.994cm}}{\pgfqpoint{1.38cm}{1.029cm}}{\pgfqpoint{1.38cm}{1.065cm}}
\pgfusepath{fill}
\begin{pgfscope}
\pgfsetdash{}{0cm}
\pgfsetlinewidth{0.818mm}
\pgfsetmiterlimit{4.0}
\pgfpathmoveto{\pgfqpoint{1.383cm}{0.178cm}}
\pgfpathcurveto{\pgfqpoint{1.383cm}{0.214cm}}{\pgfqpoint{1.369cm}{0.249cm}}{\pgfqpoint{1.343cm}{0.275cm}}
\pgfpathcurveto{\pgfqpoint{1.317cm}{0.3cm}}{\pgfqpoint{1.283cm}{0.315cm}}{\pgfqpoint{1.246cm}{0.315cm}}
\pgfpathcurveto{\pgfqpoint{1.21cm}{0.315cm}}{\pgfqpoint{1.175cm}{0.3cm}}{\pgfqpoint{1.15cm}{0.275cm}}
\pgfpathcurveto{\pgfqpoint{1.124cm}{0.249cm}}{\pgfqpoint{1.11cm}{0.214cm}}{\pgfqpoint{1.11cm}{0.178cm}}
\pgfpathcurveto{\pgfqpoint{1.11cm}{0.141cm}}{\pgfqpoint{1.124cm}{0.107cm}}{\pgfqpoint{1.15cm}{0.081cm}}
\pgfpathcurveto{\pgfqpoint{1.175cm}{0.055cm}}{\pgfqpoint{1.21cm}{0.041cm}}{\pgfqpoint{1.246cm}{0.041cm}}
\pgfpathcurveto{\pgfqpoint{1.283cm}{0.041cm}}{\pgfqpoint{1.317cm}{0.055cm}}{\pgfqpoint{1.343cm}{0.081cm}}
\pgfpathcurveto{\pgfqpoint{1.369cm}{0.107cm}}{\pgfqpoint{1.383cm}{0.141cm}}{\pgfqpoint{1.383cm}{0.178cm}}
\pgfusepath{stroke}
\end{pgfscope}
\end{pgfscope}
\end{pgfscope}
\end{pgfscope}
\end{tikzpicture}}} \circ Y \rangle_{\varepsilon} | \]
  where, for $\theta = \frac{1 - 4 \kappa}{1 - 2 \kappa}$, we bound
  \[ \lambda^2 | \langle \rho^4 \phi_{\varepsilon},
     \tilde{X}_{\varepsilon}^{\!\resizebox{!}{.8em}{
\begin{tikzpicture}
\pgfpathmoveto{\pgfqpoint{0cm}{-0.035cm}}
\pgfpathlineto{\pgfqpoint{1.976cm}{-0.035cm}}
\pgfpathlineto{\pgfqpoint{1.976cm}{1.94cm}}
\pgfpathlineto{\pgfqpoint{0cm}{1.94cm}}
\pgfpathclose
\pgfusepath{clip}
\begin{pgfscope}
\begin{pgfscope}
\pgfpathmoveto{\pgfqpoint{0cm}{-0.035cm}}
\pgfpathlineto{\pgfqpoint{1.976cm}{-0.035cm}}
\pgfpathlineto{\pgfqpoint{1.976cm}{1.94cm}}
\pgfpathlineto{\pgfqpoint{0cm}{1.94cm}}
\pgfpathclose
\pgfusepath{clip}
\begin{pgfscope}
\begin{pgfscope}
\pgfsetdash{}{0cm}
\pgfsetlinewidth{0.818mm}
\pgfsetroundcap
\pgfsetroundjoin
\pgfsetmiterlimit{7.0}
\definecolor{eps2pgf_color}{gray}{0}\pgfsetstrokecolor{eps2pgf_color}\pgfsetfillcolor{eps2pgf_color}
\pgfpathmoveto{\pgfqpoint{0.117cm}{1.815cm}}
\pgfpathlineto{\pgfqpoint{0.682cm}{1.065cm}}
\pgfpathlineto{\pgfqpoint{1.246cm}{1.815cm}}
\pgfusepath{stroke}
\end{pgfscope}
\definecolor{eps2pgf_color}{gray}{0}\pgfsetstrokecolor{eps2pgf_color}\pgfsetfillcolor{eps2pgf_color}
\pgfpathmoveto{\pgfqpoint{0.273cm}{1.789cm}}
\pgfpathcurveto{\pgfqpoint{0.273cm}{1.825cm}}{\pgfqpoint{0.259cm}{1.86cm}}{\pgfqpoint{0.233cm}{1.886cm}}
\pgfpathcurveto{\pgfqpoint{0.207cm}{1.912cm}}{\pgfqpoint{0.173cm}{1.926cm}}{\pgfqpoint{0.137cm}{1.926cm}}
\pgfpathcurveto{\pgfqpoint{0.1cm}{1.926cm}}{\pgfqpoint{0.066cm}{1.912cm}}{\pgfqpoint{0.04cm}{1.886cm}}
\pgfpathcurveto{\pgfqpoint{0.014cm}{1.86cm}}{\pgfqpoint{0cm}{1.825cm}}{\pgfqpoint{0cm}{1.789cm}}
\pgfpathcurveto{\pgfqpoint{0cm}{1.753cm}}{\pgfqpoint{0.014cm}{1.718cm}}{\pgfqpoint{0.04cm}{1.692cm}}
\pgfpathcurveto{\pgfqpoint{0.066cm}{1.667cm}}{\pgfqpoint{0.1cm}{1.652cm}}{\pgfqpoint{0.137cm}{1.652cm}}
\pgfpathcurveto{\pgfqpoint{0.173cm}{1.652cm}}{\pgfqpoint{0.207cm}{1.667cm}}{\pgfqpoint{0.233cm}{1.692cm}}
\pgfpathcurveto{\pgfqpoint{0.259cm}{1.718cm}}{\pgfqpoint{0.273cm}{1.753cm}}{\pgfqpoint{0.273cm}{1.789cm}}
\pgfusepath{fill}
\pgfpathmoveto{\pgfqpoint{1.345cm}{1.765cm}}
\pgfpathcurveto{\pgfqpoint{1.345cm}{1.801cm}}{\pgfqpoint{1.331cm}{1.836cm}}{\pgfqpoint{1.305cm}{1.862cm}}
\pgfpathcurveto{\pgfqpoint{1.28cm}{1.887cm}}{\pgfqpoint{1.245cm}{1.902cm}}{\pgfqpoint{1.209cm}{1.902cm}}
\pgfpathcurveto{\pgfqpoint{1.172cm}{1.902cm}}{\pgfqpoint{1.138cm}{1.887cm}}{\pgfqpoint{1.112cm}{1.862cm}}
\pgfpathcurveto{\pgfqpoint{1.087cm}{1.836cm}}{\pgfqpoint{1.072cm}{1.801cm}}{\pgfqpoint{1.072cm}{1.765cm}}
\pgfpathcurveto{\pgfqpoint{1.072cm}{1.728cm}}{\pgfqpoint{1.087cm}{1.694cm}}{\pgfqpoint{1.112cm}{1.668cm}}
\pgfpathcurveto{\pgfqpoint{1.138cm}{1.642cm}}{\pgfqpoint{1.172cm}{1.628cm}}{\pgfqpoint{1.209cm}{1.628cm}}
\pgfpathcurveto{\pgfqpoint{1.245cm}{1.628cm}}{\pgfqpoint{1.28cm}{1.642cm}}{\pgfqpoint{1.305cm}{1.668cm}}
\pgfpathcurveto{\pgfqpoint{1.331cm}{1.694cm}}{\pgfqpoint{1.345cm}{1.728cm}}{\pgfqpoint{1.345cm}{1.765cm}}
\pgfusepath{fill}
\begin{pgfscope}
\pgfsetdash{}{0cm}
\pgfsetlinewidth{0.818mm}
\pgfsetroundcap
\pgfsetroundjoin
\pgfsetmiterlimit{7.0}
\pgfpathmoveto{\pgfqpoint{0.682cm}{1.065cm}}
\pgfpathlineto{\pgfqpoint{1.246cm}{0.315cm}}
\pgfpathlineto{\pgfqpoint{1.811cm}{1.065cm}}
\pgfusepath{stroke}
\end{pgfscope}
\pgfpathmoveto{\pgfqpoint{1.948cm}{1.065cm}}
\pgfpathcurveto{\pgfqpoint{1.948cm}{1.101cm}}{\pgfqpoint{1.933cm}{1.136cm}}{\pgfqpoint{1.907cm}{1.162cm}}
\pgfpathcurveto{\pgfqpoint{1.882cm}{1.187cm}}{\pgfqpoint{1.847cm}{1.202cm}}{\pgfqpoint{1.811cm}{1.202cm}}
\pgfpathcurveto{\pgfqpoint{1.775cm}{1.202cm}}{\pgfqpoint{1.74cm}{1.187cm}}{\pgfqpoint{1.714cm}{1.162cm}}
\pgfpathcurveto{\pgfqpoint{1.689cm}{1.136cm}}{\pgfqpoint{1.674cm}{1.101cm}}{\pgfqpoint{1.674cm}{1.065cm}}
\pgfpathcurveto{\pgfqpoint{1.674cm}{1.029cm}}{\pgfqpoint{1.689cm}{0.994cm}}{\pgfqpoint{1.714cm}{0.968cm}}
\pgfpathcurveto{\pgfqpoint{1.74cm}{0.942cm}}{\pgfqpoint{1.775cm}{0.928cm}}{\pgfqpoint{1.811cm}{0.928cm}}
\pgfpathcurveto{\pgfqpoint{1.847cm}{0.928cm}}{\pgfqpoint{1.882cm}{0.942cm}}{\pgfqpoint{1.907cm}{0.968cm}}
\pgfpathcurveto{\pgfqpoint{1.933cm}{0.994cm}}{\pgfqpoint{1.948cm}{1.029cm}}{\pgfqpoint{1.948cm}{1.065cm}}
\pgfusepath{fill}
\begin{pgfscope}
\pgfsetdash{}{0cm}
\pgfsetlinewidth{0.818mm}
\pgfsetmiterlimit{7.0}
\pgfpathmoveto{\pgfqpoint{1.246cm}{0.315cm}}
\pgfpathlineto{\pgfqpoint{1.244cm}{1.061cm}}
\pgfusepath{stroke}
\end{pgfscope}
\pgfpathmoveto{\pgfqpoint{1.38cm}{1.065cm}}
\pgfpathcurveto{\pgfqpoint{1.38cm}{1.101cm}}{\pgfqpoint{1.366cm}{1.136cm}}{\pgfqpoint{1.34cm}{1.162cm}}
\pgfpathcurveto{\pgfqpoint{1.315cm}{1.187cm}}{\pgfqpoint{1.28cm}{1.202cm}}{\pgfqpoint{1.244cm}{1.202cm}}
\pgfpathcurveto{\pgfqpoint{1.207cm}{1.202cm}}{\pgfqpoint{1.173cm}{1.187cm}}{\pgfqpoint{1.147cm}{1.162cm}}
\pgfpathcurveto{\pgfqpoint{1.121cm}{1.136cm}}{\pgfqpoint{1.107cm}{1.101cm}}{\pgfqpoint{1.107cm}{1.065cm}}
\pgfpathcurveto{\pgfqpoint{1.107cm}{1.029cm}}{\pgfqpoint{1.121cm}{0.994cm}}{\pgfqpoint{1.147cm}{0.968cm}}
\pgfpathcurveto{\pgfqpoint{1.173cm}{0.942cm}}{\pgfqpoint{1.207cm}{0.928cm}}{\pgfqpoint{1.244cm}{0.928cm}}
\pgfpathcurveto{\pgfqpoint{1.28cm}{0.928cm}}{\pgfqpoint{1.315cm}{0.942cm}}{\pgfqpoint{1.34cm}{0.968cm}}
\pgfpathcurveto{\pgfqpoint{1.366cm}{0.994cm}}{\pgfqpoint{1.38cm}{1.029cm}}{\pgfqpoint{1.38cm}{1.065cm}}
\pgfusepath{fill}
\begin{pgfscope}
\pgfsetdash{}{0cm}
\pgfsetlinewidth{0.818mm}
\pgfsetmiterlimit{4.0}
\pgfpathmoveto{\pgfqpoint{1.383cm}{0.178cm}}
\pgfpathcurveto{\pgfqpoint{1.383cm}{0.214cm}}{\pgfqpoint{1.369cm}{0.249cm}}{\pgfqpoint{1.343cm}{0.275cm}}
\pgfpathcurveto{\pgfqpoint{1.317cm}{0.3cm}}{\pgfqpoint{1.283cm}{0.315cm}}{\pgfqpoint{1.246cm}{0.315cm}}
\pgfpathcurveto{\pgfqpoint{1.21cm}{0.315cm}}{\pgfqpoint{1.175cm}{0.3cm}}{\pgfqpoint{1.15cm}{0.275cm}}
\pgfpathcurveto{\pgfqpoint{1.124cm}{0.249cm}}{\pgfqpoint{1.11cm}{0.214cm}}{\pgfqpoint{1.11cm}{0.178cm}}
\pgfpathcurveto{\pgfqpoint{1.11cm}{0.141cm}}{\pgfqpoint{1.124cm}{0.107cm}}{\pgfqpoint{1.15cm}{0.081cm}}
\pgfpathcurveto{\pgfqpoint{1.175cm}{0.055cm}}{\pgfqpoint{1.21cm}{0.041cm}}{\pgfqpoint{1.246cm}{0.041cm}}
\pgfpathcurveto{\pgfqpoint{1.283cm}{0.041cm}}{\pgfqpoint{1.317cm}{0.055cm}}{\pgfqpoint{1.343cm}{0.081cm}}
\pgfpathcurveto{\pgfqpoint{1.369cm}{0.107cm}}{\pgfqpoint{1.383cm}{0.141cm}}{\pgfqpoint{1.383cm}{0.178cm}}
\pgfusepath{stroke}
\end{pgfscope}
\end{pgfscope}
\end{pgfscope}
\end{pgfscope}
\end{tikzpicture}}} \Join Y_{\varepsilon}
     \rangle_{\varepsilon} | \lesssim \lambda^2 \| \rho^{4 - 2 \sigma}
     \phi_{\varepsilon} \|_{B^{\kappa, \varepsilon}_{1, 1}} \| \rho^{2 \sigma}
     \tilde{X}_{\varepsilon}^{\!\resizebox{!}{.8em}{
\begin{tikzpicture}
\pgfpathmoveto{\pgfqpoint{0cm}{-0.035cm}}
\pgfpathlineto{\pgfqpoint{1.976cm}{-0.035cm}}
\pgfpathlineto{\pgfqpoint{1.976cm}{1.94cm}}
\pgfpathlineto{\pgfqpoint{0cm}{1.94cm}}
\pgfpathclose
\pgfusepath{clip}
\begin{pgfscope}
\begin{pgfscope}
\pgfpathmoveto{\pgfqpoint{0cm}{-0.035cm}}
\pgfpathlineto{\pgfqpoint{1.976cm}{-0.035cm}}
\pgfpathlineto{\pgfqpoint{1.976cm}{1.94cm}}
\pgfpathlineto{\pgfqpoint{0cm}{1.94cm}}
\pgfpathclose
\pgfusepath{clip}
\begin{pgfscope}
\begin{pgfscope}
\pgfsetdash{}{0cm}
\pgfsetlinewidth{0.818mm}
\pgfsetroundcap
\pgfsetroundjoin
\pgfsetmiterlimit{7.0}
\definecolor{eps2pgf_color}{gray}{0}\pgfsetstrokecolor{eps2pgf_color}\pgfsetfillcolor{eps2pgf_color}
\pgfpathmoveto{\pgfqpoint{0.117cm}{1.815cm}}
\pgfpathlineto{\pgfqpoint{0.682cm}{1.065cm}}
\pgfpathlineto{\pgfqpoint{1.246cm}{1.815cm}}
\pgfusepath{stroke}
\end{pgfscope}
\definecolor{eps2pgf_color}{gray}{0}\pgfsetstrokecolor{eps2pgf_color}\pgfsetfillcolor{eps2pgf_color}
\pgfpathmoveto{\pgfqpoint{0.273cm}{1.789cm}}
\pgfpathcurveto{\pgfqpoint{0.273cm}{1.825cm}}{\pgfqpoint{0.259cm}{1.86cm}}{\pgfqpoint{0.233cm}{1.886cm}}
\pgfpathcurveto{\pgfqpoint{0.207cm}{1.912cm}}{\pgfqpoint{0.173cm}{1.926cm}}{\pgfqpoint{0.137cm}{1.926cm}}
\pgfpathcurveto{\pgfqpoint{0.1cm}{1.926cm}}{\pgfqpoint{0.066cm}{1.912cm}}{\pgfqpoint{0.04cm}{1.886cm}}
\pgfpathcurveto{\pgfqpoint{0.014cm}{1.86cm}}{\pgfqpoint{0cm}{1.825cm}}{\pgfqpoint{0cm}{1.789cm}}
\pgfpathcurveto{\pgfqpoint{0cm}{1.753cm}}{\pgfqpoint{0.014cm}{1.718cm}}{\pgfqpoint{0.04cm}{1.692cm}}
\pgfpathcurveto{\pgfqpoint{0.066cm}{1.667cm}}{\pgfqpoint{0.1cm}{1.652cm}}{\pgfqpoint{0.137cm}{1.652cm}}
\pgfpathcurveto{\pgfqpoint{0.173cm}{1.652cm}}{\pgfqpoint{0.207cm}{1.667cm}}{\pgfqpoint{0.233cm}{1.692cm}}
\pgfpathcurveto{\pgfqpoint{0.259cm}{1.718cm}}{\pgfqpoint{0.273cm}{1.753cm}}{\pgfqpoint{0.273cm}{1.789cm}}
\pgfusepath{fill}
\pgfpathmoveto{\pgfqpoint{1.345cm}{1.765cm}}
\pgfpathcurveto{\pgfqpoint{1.345cm}{1.801cm}}{\pgfqpoint{1.331cm}{1.836cm}}{\pgfqpoint{1.305cm}{1.862cm}}
\pgfpathcurveto{\pgfqpoint{1.28cm}{1.887cm}}{\pgfqpoint{1.245cm}{1.902cm}}{\pgfqpoint{1.209cm}{1.902cm}}
\pgfpathcurveto{\pgfqpoint{1.172cm}{1.902cm}}{\pgfqpoint{1.138cm}{1.887cm}}{\pgfqpoint{1.112cm}{1.862cm}}
\pgfpathcurveto{\pgfqpoint{1.087cm}{1.836cm}}{\pgfqpoint{1.072cm}{1.801cm}}{\pgfqpoint{1.072cm}{1.765cm}}
\pgfpathcurveto{\pgfqpoint{1.072cm}{1.728cm}}{\pgfqpoint{1.087cm}{1.694cm}}{\pgfqpoint{1.112cm}{1.668cm}}
\pgfpathcurveto{\pgfqpoint{1.138cm}{1.642cm}}{\pgfqpoint{1.172cm}{1.628cm}}{\pgfqpoint{1.209cm}{1.628cm}}
\pgfpathcurveto{\pgfqpoint{1.245cm}{1.628cm}}{\pgfqpoint{1.28cm}{1.642cm}}{\pgfqpoint{1.305cm}{1.668cm}}
\pgfpathcurveto{\pgfqpoint{1.331cm}{1.694cm}}{\pgfqpoint{1.345cm}{1.728cm}}{\pgfqpoint{1.345cm}{1.765cm}}
\pgfusepath{fill}
\begin{pgfscope}
\pgfsetdash{}{0cm}
\pgfsetlinewidth{0.818mm}
\pgfsetroundcap
\pgfsetroundjoin
\pgfsetmiterlimit{7.0}
\pgfpathmoveto{\pgfqpoint{0.682cm}{1.065cm}}
\pgfpathlineto{\pgfqpoint{1.246cm}{0.315cm}}
\pgfpathlineto{\pgfqpoint{1.811cm}{1.065cm}}
\pgfusepath{stroke}
\end{pgfscope}
\pgfpathmoveto{\pgfqpoint{1.948cm}{1.065cm}}
\pgfpathcurveto{\pgfqpoint{1.948cm}{1.101cm}}{\pgfqpoint{1.933cm}{1.136cm}}{\pgfqpoint{1.907cm}{1.162cm}}
\pgfpathcurveto{\pgfqpoint{1.882cm}{1.187cm}}{\pgfqpoint{1.847cm}{1.202cm}}{\pgfqpoint{1.811cm}{1.202cm}}
\pgfpathcurveto{\pgfqpoint{1.775cm}{1.202cm}}{\pgfqpoint{1.74cm}{1.187cm}}{\pgfqpoint{1.714cm}{1.162cm}}
\pgfpathcurveto{\pgfqpoint{1.689cm}{1.136cm}}{\pgfqpoint{1.674cm}{1.101cm}}{\pgfqpoint{1.674cm}{1.065cm}}
\pgfpathcurveto{\pgfqpoint{1.674cm}{1.029cm}}{\pgfqpoint{1.689cm}{0.994cm}}{\pgfqpoint{1.714cm}{0.968cm}}
\pgfpathcurveto{\pgfqpoint{1.74cm}{0.942cm}}{\pgfqpoint{1.775cm}{0.928cm}}{\pgfqpoint{1.811cm}{0.928cm}}
\pgfpathcurveto{\pgfqpoint{1.847cm}{0.928cm}}{\pgfqpoint{1.882cm}{0.942cm}}{\pgfqpoint{1.907cm}{0.968cm}}
\pgfpathcurveto{\pgfqpoint{1.933cm}{0.994cm}}{\pgfqpoint{1.948cm}{1.029cm}}{\pgfqpoint{1.948cm}{1.065cm}}
\pgfusepath{fill}
\begin{pgfscope}
\pgfsetdash{}{0cm}
\pgfsetlinewidth{0.818mm}
\pgfsetmiterlimit{7.0}
\pgfpathmoveto{\pgfqpoint{1.246cm}{0.315cm}}
\pgfpathlineto{\pgfqpoint{1.244cm}{1.061cm}}
\pgfusepath{stroke}
\end{pgfscope}
\pgfpathmoveto{\pgfqpoint{1.38cm}{1.065cm}}
\pgfpathcurveto{\pgfqpoint{1.38cm}{1.101cm}}{\pgfqpoint{1.366cm}{1.136cm}}{\pgfqpoint{1.34cm}{1.162cm}}
\pgfpathcurveto{\pgfqpoint{1.315cm}{1.187cm}}{\pgfqpoint{1.28cm}{1.202cm}}{\pgfqpoint{1.244cm}{1.202cm}}
\pgfpathcurveto{\pgfqpoint{1.207cm}{1.202cm}}{\pgfqpoint{1.173cm}{1.187cm}}{\pgfqpoint{1.147cm}{1.162cm}}
\pgfpathcurveto{\pgfqpoint{1.121cm}{1.136cm}}{\pgfqpoint{1.107cm}{1.101cm}}{\pgfqpoint{1.107cm}{1.065cm}}
\pgfpathcurveto{\pgfqpoint{1.107cm}{1.029cm}}{\pgfqpoint{1.121cm}{0.994cm}}{\pgfqpoint{1.147cm}{0.968cm}}
\pgfpathcurveto{\pgfqpoint{1.173cm}{0.942cm}}{\pgfqpoint{1.207cm}{0.928cm}}{\pgfqpoint{1.244cm}{0.928cm}}
\pgfpathcurveto{\pgfqpoint{1.28cm}{0.928cm}}{\pgfqpoint{1.315cm}{0.942cm}}{\pgfqpoint{1.34cm}{0.968cm}}
\pgfpathcurveto{\pgfqpoint{1.366cm}{0.994cm}}{\pgfqpoint{1.38cm}{1.029cm}}{\pgfqpoint{1.38cm}{1.065cm}}
\pgfusepath{fill}
\begin{pgfscope}
\pgfsetdash{}{0cm}
\pgfsetlinewidth{0.818mm}
\pgfsetmiterlimit{4.0}
\pgfpathmoveto{\pgfqpoint{1.383cm}{0.178cm}}
\pgfpathcurveto{\pgfqpoint{1.383cm}{0.214cm}}{\pgfqpoint{1.369cm}{0.249cm}}{\pgfqpoint{1.343cm}{0.275cm}}
\pgfpathcurveto{\pgfqpoint{1.317cm}{0.3cm}}{\pgfqpoint{1.283cm}{0.315cm}}{\pgfqpoint{1.246cm}{0.315cm}}
\pgfpathcurveto{\pgfqpoint{1.21cm}{0.315cm}}{\pgfqpoint{1.175cm}{0.3cm}}{\pgfqpoint{1.15cm}{0.275cm}}
\pgfpathcurveto{\pgfqpoint{1.124cm}{0.249cm}}{\pgfqpoint{1.11cm}{0.214cm}}{\pgfqpoint{1.11cm}{0.178cm}}
\pgfpathcurveto{\pgfqpoint{1.11cm}{0.141cm}}{\pgfqpoint{1.124cm}{0.107cm}}{\pgfqpoint{1.15cm}{0.081cm}}
\pgfpathcurveto{\pgfqpoint{1.175cm}{0.055cm}}{\pgfqpoint{1.21cm}{0.041cm}}{\pgfqpoint{1.246cm}{0.041cm}}
\pgfpathcurveto{\pgfqpoint{1.283cm}{0.041cm}}{\pgfqpoint{1.317cm}{0.055cm}}{\pgfqpoint{1.343cm}{0.081cm}}
\pgfpathcurveto{\pgfqpoint{1.369cm}{0.107cm}}{\pgfqpoint{1.383cm}{0.141cm}}{\pgfqpoint{1.383cm}{0.178cm}}
\pgfusepath{stroke}
\end{pgfscope}
\end{pgfscope}
\end{pgfscope}
\end{pgfscope}
\end{tikzpicture}}} \Join Y_{\varepsilon}
     \|_{\mathscr{C} \hspace{.1em}^{- \kappa, \varepsilon}} \]
  \[ \lesssim \lambda^2 \| \rho \phi_{\varepsilon} \|^{\theta}_{L^{4,
     \varepsilon}} \| \rho^2 \phi_{\varepsilon} \|^{1 - \theta}_{H^{1 - 2
     \kappa, \varepsilon}} \| \rho^{\sigma}
     \tilde{X}_{\varepsilon}^{\!\resizebox{!}{.8em}{
\begin{tikzpicture}
\pgfpathmoveto{\pgfqpoint{0cm}{-0.035cm}}
\pgfpathlineto{\pgfqpoint{1.976cm}{-0.035cm}}
\pgfpathlineto{\pgfqpoint{1.976cm}{1.94cm}}
\pgfpathlineto{\pgfqpoint{0cm}{1.94cm}}
\pgfpathclose
\pgfusepath{clip}
\begin{pgfscope}
\begin{pgfscope}
\pgfpathmoveto{\pgfqpoint{0cm}{-0.035cm}}
\pgfpathlineto{\pgfqpoint{1.976cm}{-0.035cm}}
\pgfpathlineto{\pgfqpoint{1.976cm}{1.94cm}}
\pgfpathlineto{\pgfqpoint{0cm}{1.94cm}}
\pgfpathclose
\pgfusepath{clip}
\begin{pgfscope}
\begin{pgfscope}
\pgfsetdash{}{0cm}
\pgfsetlinewidth{0.818mm}
\pgfsetroundcap
\pgfsetroundjoin
\pgfsetmiterlimit{7.0}
\definecolor{eps2pgf_color}{gray}{0}\pgfsetstrokecolor{eps2pgf_color}\pgfsetfillcolor{eps2pgf_color}
\pgfpathmoveto{\pgfqpoint{0.117cm}{1.815cm}}
\pgfpathlineto{\pgfqpoint{0.682cm}{1.065cm}}
\pgfpathlineto{\pgfqpoint{1.246cm}{1.815cm}}
\pgfusepath{stroke}
\end{pgfscope}
\definecolor{eps2pgf_color}{gray}{0}\pgfsetstrokecolor{eps2pgf_color}\pgfsetfillcolor{eps2pgf_color}
\pgfpathmoveto{\pgfqpoint{0.273cm}{1.789cm}}
\pgfpathcurveto{\pgfqpoint{0.273cm}{1.825cm}}{\pgfqpoint{0.259cm}{1.86cm}}{\pgfqpoint{0.233cm}{1.886cm}}
\pgfpathcurveto{\pgfqpoint{0.207cm}{1.912cm}}{\pgfqpoint{0.173cm}{1.926cm}}{\pgfqpoint{0.137cm}{1.926cm}}
\pgfpathcurveto{\pgfqpoint{0.1cm}{1.926cm}}{\pgfqpoint{0.066cm}{1.912cm}}{\pgfqpoint{0.04cm}{1.886cm}}
\pgfpathcurveto{\pgfqpoint{0.014cm}{1.86cm}}{\pgfqpoint{0cm}{1.825cm}}{\pgfqpoint{0cm}{1.789cm}}
\pgfpathcurveto{\pgfqpoint{0cm}{1.753cm}}{\pgfqpoint{0.014cm}{1.718cm}}{\pgfqpoint{0.04cm}{1.692cm}}
\pgfpathcurveto{\pgfqpoint{0.066cm}{1.667cm}}{\pgfqpoint{0.1cm}{1.652cm}}{\pgfqpoint{0.137cm}{1.652cm}}
\pgfpathcurveto{\pgfqpoint{0.173cm}{1.652cm}}{\pgfqpoint{0.207cm}{1.667cm}}{\pgfqpoint{0.233cm}{1.692cm}}
\pgfpathcurveto{\pgfqpoint{0.259cm}{1.718cm}}{\pgfqpoint{0.273cm}{1.753cm}}{\pgfqpoint{0.273cm}{1.789cm}}
\pgfusepath{fill}
\pgfpathmoveto{\pgfqpoint{1.345cm}{1.765cm}}
\pgfpathcurveto{\pgfqpoint{1.345cm}{1.801cm}}{\pgfqpoint{1.331cm}{1.836cm}}{\pgfqpoint{1.305cm}{1.862cm}}
\pgfpathcurveto{\pgfqpoint{1.28cm}{1.887cm}}{\pgfqpoint{1.245cm}{1.902cm}}{\pgfqpoint{1.209cm}{1.902cm}}
\pgfpathcurveto{\pgfqpoint{1.172cm}{1.902cm}}{\pgfqpoint{1.138cm}{1.887cm}}{\pgfqpoint{1.112cm}{1.862cm}}
\pgfpathcurveto{\pgfqpoint{1.087cm}{1.836cm}}{\pgfqpoint{1.072cm}{1.801cm}}{\pgfqpoint{1.072cm}{1.765cm}}
\pgfpathcurveto{\pgfqpoint{1.072cm}{1.728cm}}{\pgfqpoint{1.087cm}{1.694cm}}{\pgfqpoint{1.112cm}{1.668cm}}
\pgfpathcurveto{\pgfqpoint{1.138cm}{1.642cm}}{\pgfqpoint{1.172cm}{1.628cm}}{\pgfqpoint{1.209cm}{1.628cm}}
\pgfpathcurveto{\pgfqpoint{1.245cm}{1.628cm}}{\pgfqpoint{1.28cm}{1.642cm}}{\pgfqpoint{1.305cm}{1.668cm}}
\pgfpathcurveto{\pgfqpoint{1.331cm}{1.694cm}}{\pgfqpoint{1.345cm}{1.728cm}}{\pgfqpoint{1.345cm}{1.765cm}}
\pgfusepath{fill}
\begin{pgfscope}
\pgfsetdash{}{0cm}
\pgfsetlinewidth{0.818mm}
\pgfsetroundcap
\pgfsetroundjoin
\pgfsetmiterlimit{7.0}
\pgfpathmoveto{\pgfqpoint{0.682cm}{1.065cm}}
\pgfpathlineto{\pgfqpoint{1.246cm}{0.315cm}}
\pgfpathlineto{\pgfqpoint{1.811cm}{1.065cm}}
\pgfusepath{stroke}
\end{pgfscope}
\pgfpathmoveto{\pgfqpoint{1.948cm}{1.065cm}}
\pgfpathcurveto{\pgfqpoint{1.948cm}{1.101cm}}{\pgfqpoint{1.933cm}{1.136cm}}{\pgfqpoint{1.907cm}{1.162cm}}
\pgfpathcurveto{\pgfqpoint{1.882cm}{1.187cm}}{\pgfqpoint{1.847cm}{1.202cm}}{\pgfqpoint{1.811cm}{1.202cm}}
\pgfpathcurveto{\pgfqpoint{1.775cm}{1.202cm}}{\pgfqpoint{1.74cm}{1.187cm}}{\pgfqpoint{1.714cm}{1.162cm}}
\pgfpathcurveto{\pgfqpoint{1.689cm}{1.136cm}}{\pgfqpoint{1.674cm}{1.101cm}}{\pgfqpoint{1.674cm}{1.065cm}}
\pgfpathcurveto{\pgfqpoint{1.674cm}{1.029cm}}{\pgfqpoint{1.689cm}{0.994cm}}{\pgfqpoint{1.714cm}{0.968cm}}
\pgfpathcurveto{\pgfqpoint{1.74cm}{0.942cm}}{\pgfqpoint{1.775cm}{0.928cm}}{\pgfqpoint{1.811cm}{0.928cm}}
\pgfpathcurveto{\pgfqpoint{1.847cm}{0.928cm}}{\pgfqpoint{1.882cm}{0.942cm}}{\pgfqpoint{1.907cm}{0.968cm}}
\pgfpathcurveto{\pgfqpoint{1.933cm}{0.994cm}}{\pgfqpoint{1.948cm}{1.029cm}}{\pgfqpoint{1.948cm}{1.065cm}}
\pgfusepath{fill}
\begin{pgfscope}
\pgfsetdash{}{0cm}
\pgfsetlinewidth{0.818mm}
\pgfsetmiterlimit{7.0}
\pgfpathmoveto{\pgfqpoint{1.246cm}{0.315cm}}
\pgfpathlineto{\pgfqpoint{1.244cm}{1.061cm}}
\pgfusepath{stroke}
\end{pgfscope}
\pgfpathmoveto{\pgfqpoint{1.38cm}{1.065cm}}
\pgfpathcurveto{\pgfqpoint{1.38cm}{1.101cm}}{\pgfqpoint{1.366cm}{1.136cm}}{\pgfqpoint{1.34cm}{1.162cm}}
\pgfpathcurveto{\pgfqpoint{1.315cm}{1.187cm}}{\pgfqpoint{1.28cm}{1.202cm}}{\pgfqpoint{1.244cm}{1.202cm}}
\pgfpathcurveto{\pgfqpoint{1.207cm}{1.202cm}}{\pgfqpoint{1.173cm}{1.187cm}}{\pgfqpoint{1.147cm}{1.162cm}}
\pgfpathcurveto{\pgfqpoint{1.121cm}{1.136cm}}{\pgfqpoint{1.107cm}{1.101cm}}{\pgfqpoint{1.107cm}{1.065cm}}
\pgfpathcurveto{\pgfqpoint{1.107cm}{1.029cm}}{\pgfqpoint{1.121cm}{0.994cm}}{\pgfqpoint{1.147cm}{0.968cm}}
\pgfpathcurveto{\pgfqpoint{1.173cm}{0.942cm}}{\pgfqpoint{1.207cm}{0.928cm}}{\pgfqpoint{1.244cm}{0.928cm}}
\pgfpathcurveto{\pgfqpoint{1.28cm}{0.928cm}}{\pgfqpoint{1.315cm}{0.942cm}}{\pgfqpoint{1.34cm}{0.968cm}}
\pgfpathcurveto{\pgfqpoint{1.366cm}{0.994cm}}{\pgfqpoint{1.38cm}{1.029cm}}{\pgfqpoint{1.38cm}{1.065cm}}
\pgfusepath{fill}
\begin{pgfscope}
\pgfsetdash{}{0cm}
\pgfsetlinewidth{0.818mm}
\pgfsetmiterlimit{4.0}
\pgfpathmoveto{\pgfqpoint{1.383cm}{0.178cm}}
\pgfpathcurveto{\pgfqpoint{1.383cm}{0.214cm}}{\pgfqpoint{1.369cm}{0.249cm}}{\pgfqpoint{1.343cm}{0.275cm}}
\pgfpathcurveto{\pgfqpoint{1.317cm}{0.3cm}}{\pgfqpoint{1.283cm}{0.315cm}}{\pgfqpoint{1.246cm}{0.315cm}}
\pgfpathcurveto{\pgfqpoint{1.21cm}{0.315cm}}{\pgfqpoint{1.175cm}{0.3cm}}{\pgfqpoint{1.15cm}{0.275cm}}
\pgfpathcurveto{\pgfqpoint{1.124cm}{0.249cm}}{\pgfqpoint{1.11cm}{0.214cm}}{\pgfqpoint{1.11cm}{0.178cm}}
\pgfpathcurveto{\pgfqpoint{1.11cm}{0.141cm}}{\pgfqpoint{1.124cm}{0.107cm}}{\pgfqpoint{1.15cm}{0.081cm}}
\pgfpathcurveto{\pgfqpoint{1.175cm}{0.055cm}}{\pgfqpoint{1.21cm}{0.041cm}}{\pgfqpoint{1.246cm}{0.041cm}}
\pgfpathcurveto{\pgfqpoint{1.283cm}{0.041cm}}{\pgfqpoint{1.317cm}{0.055cm}}{\pgfqpoint{1.343cm}{0.081cm}}
\pgfpathcurveto{\pgfqpoint{1.369cm}{0.107cm}}{\pgfqpoint{1.383cm}{0.141cm}}{\pgfqpoint{1.383cm}{0.178cm}}
\pgfusepath{stroke}
\end{pgfscope}
\end{pgfscope}
\end{pgfscope}
\end{pgfscope}
\end{tikzpicture}}} \|_{\mathscr{C} \hspace{.1em}^{-
     \kappa, \varepsilon}} \| \rho^{\sigma} Y_{\varepsilon} \|_{L^{\infty,
     \varepsilon}} \leqslant \lambda^{(8 - \theta) / (2 + \theta)}
     C_{\delta}^{} \| \mathbb{X}_{\varepsilon} \|^{8 + \vartheta} + \delta
     \Upsilon_{\varepsilon} \]
  \[ \leqslant (\lambda^2 + \lambda^3) C_{\delta}^{} \|
     \mathbb{X}_{\varepsilon} \|^{8 + \vartheta} + \delta
     \Upsilon_{\varepsilon} . \]
  and the resonant term is bounded as
  \[ \lambda^2 | \langle \rho^4 \phi_{\varepsilon},
     \tilde{X}_{\varepsilon}^{\!\resizebox{!}{.8em}{
\begin{tikzpicture}
\pgfpathmoveto{\pgfqpoint{0cm}{-0.035cm}}
\pgfpathlineto{\pgfqpoint{1.976cm}{-0.035cm}}
\pgfpathlineto{\pgfqpoint{1.976cm}{1.94cm}}
\pgfpathlineto{\pgfqpoint{0cm}{1.94cm}}
\pgfpathclose
\pgfusepath{clip}
\begin{pgfscope}
\begin{pgfscope}
\pgfpathmoveto{\pgfqpoint{0cm}{-0.035cm}}
\pgfpathlineto{\pgfqpoint{1.976cm}{-0.035cm}}
\pgfpathlineto{\pgfqpoint{1.976cm}{1.94cm}}
\pgfpathlineto{\pgfqpoint{0cm}{1.94cm}}
\pgfpathclose
\pgfusepath{clip}
\begin{pgfscope}
\begin{pgfscope}
\pgfsetdash{}{0cm}
\pgfsetlinewidth{0.818mm}
\pgfsetroundcap
\pgfsetroundjoin
\pgfsetmiterlimit{7.0}
\definecolor{eps2pgf_color}{gray}{0}\pgfsetstrokecolor{eps2pgf_color}\pgfsetfillcolor{eps2pgf_color}
\pgfpathmoveto{\pgfqpoint{0.117cm}{1.815cm}}
\pgfpathlineto{\pgfqpoint{0.682cm}{1.065cm}}
\pgfpathlineto{\pgfqpoint{1.246cm}{1.815cm}}
\pgfusepath{stroke}
\end{pgfscope}
\definecolor{eps2pgf_color}{gray}{0}\pgfsetstrokecolor{eps2pgf_color}\pgfsetfillcolor{eps2pgf_color}
\pgfpathmoveto{\pgfqpoint{0.273cm}{1.789cm}}
\pgfpathcurveto{\pgfqpoint{0.273cm}{1.825cm}}{\pgfqpoint{0.259cm}{1.86cm}}{\pgfqpoint{0.233cm}{1.886cm}}
\pgfpathcurveto{\pgfqpoint{0.207cm}{1.912cm}}{\pgfqpoint{0.173cm}{1.926cm}}{\pgfqpoint{0.137cm}{1.926cm}}
\pgfpathcurveto{\pgfqpoint{0.1cm}{1.926cm}}{\pgfqpoint{0.066cm}{1.912cm}}{\pgfqpoint{0.04cm}{1.886cm}}
\pgfpathcurveto{\pgfqpoint{0.014cm}{1.86cm}}{\pgfqpoint{0cm}{1.825cm}}{\pgfqpoint{0cm}{1.789cm}}
\pgfpathcurveto{\pgfqpoint{0cm}{1.753cm}}{\pgfqpoint{0.014cm}{1.718cm}}{\pgfqpoint{0.04cm}{1.692cm}}
\pgfpathcurveto{\pgfqpoint{0.066cm}{1.667cm}}{\pgfqpoint{0.1cm}{1.652cm}}{\pgfqpoint{0.137cm}{1.652cm}}
\pgfpathcurveto{\pgfqpoint{0.173cm}{1.652cm}}{\pgfqpoint{0.207cm}{1.667cm}}{\pgfqpoint{0.233cm}{1.692cm}}
\pgfpathcurveto{\pgfqpoint{0.259cm}{1.718cm}}{\pgfqpoint{0.273cm}{1.753cm}}{\pgfqpoint{0.273cm}{1.789cm}}
\pgfusepath{fill}
\pgfpathmoveto{\pgfqpoint{1.345cm}{1.765cm}}
\pgfpathcurveto{\pgfqpoint{1.345cm}{1.801cm}}{\pgfqpoint{1.331cm}{1.836cm}}{\pgfqpoint{1.305cm}{1.862cm}}
\pgfpathcurveto{\pgfqpoint{1.28cm}{1.887cm}}{\pgfqpoint{1.245cm}{1.902cm}}{\pgfqpoint{1.209cm}{1.902cm}}
\pgfpathcurveto{\pgfqpoint{1.172cm}{1.902cm}}{\pgfqpoint{1.138cm}{1.887cm}}{\pgfqpoint{1.112cm}{1.862cm}}
\pgfpathcurveto{\pgfqpoint{1.087cm}{1.836cm}}{\pgfqpoint{1.072cm}{1.801cm}}{\pgfqpoint{1.072cm}{1.765cm}}
\pgfpathcurveto{\pgfqpoint{1.072cm}{1.728cm}}{\pgfqpoint{1.087cm}{1.694cm}}{\pgfqpoint{1.112cm}{1.668cm}}
\pgfpathcurveto{\pgfqpoint{1.138cm}{1.642cm}}{\pgfqpoint{1.172cm}{1.628cm}}{\pgfqpoint{1.209cm}{1.628cm}}
\pgfpathcurveto{\pgfqpoint{1.245cm}{1.628cm}}{\pgfqpoint{1.28cm}{1.642cm}}{\pgfqpoint{1.305cm}{1.668cm}}
\pgfpathcurveto{\pgfqpoint{1.331cm}{1.694cm}}{\pgfqpoint{1.345cm}{1.728cm}}{\pgfqpoint{1.345cm}{1.765cm}}
\pgfusepath{fill}
\begin{pgfscope}
\pgfsetdash{}{0cm}
\pgfsetlinewidth{0.818mm}
\pgfsetroundcap
\pgfsetroundjoin
\pgfsetmiterlimit{7.0}
\pgfpathmoveto{\pgfqpoint{0.682cm}{1.065cm}}
\pgfpathlineto{\pgfqpoint{1.246cm}{0.315cm}}
\pgfpathlineto{\pgfqpoint{1.811cm}{1.065cm}}
\pgfusepath{stroke}
\end{pgfscope}
\pgfpathmoveto{\pgfqpoint{1.948cm}{1.065cm}}
\pgfpathcurveto{\pgfqpoint{1.948cm}{1.101cm}}{\pgfqpoint{1.933cm}{1.136cm}}{\pgfqpoint{1.907cm}{1.162cm}}
\pgfpathcurveto{\pgfqpoint{1.882cm}{1.187cm}}{\pgfqpoint{1.847cm}{1.202cm}}{\pgfqpoint{1.811cm}{1.202cm}}
\pgfpathcurveto{\pgfqpoint{1.775cm}{1.202cm}}{\pgfqpoint{1.74cm}{1.187cm}}{\pgfqpoint{1.714cm}{1.162cm}}
\pgfpathcurveto{\pgfqpoint{1.689cm}{1.136cm}}{\pgfqpoint{1.674cm}{1.101cm}}{\pgfqpoint{1.674cm}{1.065cm}}
\pgfpathcurveto{\pgfqpoint{1.674cm}{1.029cm}}{\pgfqpoint{1.689cm}{0.994cm}}{\pgfqpoint{1.714cm}{0.968cm}}
\pgfpathcurveto{\pgfqpoint{1.74cm}{0.942cm}}{\pgfqpoint{1.775cm}{0.928cm}}{\pgfqpoint{1.811cm}{0.928cm}}
\pgfpathcurveto{\pgfqpoint{1.847cm}{0.928cm}}{\pgfqpoint{1.882cm}{0.942cm}}{\pgfqpoint{1.907cm}{0.968cm}}
\pgfpathcurveto{\pgfqpoint{1.933cm}{0.994cm}}{\pgfqpoint{1.948cm}{1.029cm}}{\pgfqpoint{1.948cm}{1.065cm}}
\pgfusepath{fill}
\begin{pgfscope}
\pgfsetdash{}{0cm}
\pgfsetlinewidth{0.818mm}
\pgfsetmiterlimit{7.0}
\pgfpathmoveto{\pgfqpoint{1.246cm}{0.315cm}}
\pgfpathlineto{\pgfqpoint{1.244cm}{1.061cm}}
\pgfusepath{stroke}
\end{pgfscope}
\pgfpathmoveto{\pgfqpoint{1.38cm}{1.065cm}}
\pgfpathcurveto{\pgfqpoint{1.38cm}{1.101cm}}{\pgfqpoint{1.366cm}{1.136cm}}{\pgfqpoint{1.34cm}{1.162cm}}
\pgfpathcurveto{\pgfqpoint{1.315cm}{1.187cm}}{\pgfqpoint{1.28cm}{1.202cm}}{\pgfqpoint{1.244cm}{1.202cm}}
\pgfpathcurveto{\pgfqpoint{1.207cm}{1.202cm}}{\pgfqpoint{1.173cm}{1.187cm}}{\pgfqpoint{1.147cm}{1.162cm}}
\pgfpathcurveto{\pgfqpoint{1.121cm}{1.136cm}}{\pgfqpoint{1.107cm}{1.101cm}}{\pgfqpoint{1.107cm}{1.065cm}}
\pgfpathcurveto{\pgfqpoint{1.107cm}{1.029cm}}{\pgfqpoint{1.121cm}{0.994cm}}{\pgfqpoint{1.147cm}{0.968cm}}
\pgfpathcurveto{\pgfqpoint{1.173cm}{0.942cm}}{\pgfqpoint{1.207cm}{0.928cm}}{\pgfqpoint{1.244cm}{0.928cm}}
\pgfpathcurveto{\pgfqpoint{1.28cm}{0.928cm}}{\pgfqpoint{1.315cm}{0.942cm}}{\pgfqpoint{1.34cm}{0.968cm}}
\pgfpathcurveto{\pgfqpoint{1.366cm}{0.994cm}}{\pgfqpoint{1.38cm}{1.029cm}}{\pgfqpoint{1.38cm}{1.065cm}}
\pgfusepath{fill}
\begin{pgfscope}
\pgfsetdash{}{0cm}
\pgfsetlinewidth{0.818mm}
\pgfsetmiterlimit{4.0}
\pgfpathmoveto{\pgfqpoint{1.383cm}{0.178cm}}
\pgfpathcurveto{\pgfqpoint{1.383cm}{0.214cm}}{\pgfqpoint{1.369cm}{0.249cm}}{\pgfqpoint{1.343cm}{0.275cm}}
\pgfpathcurveto{\pgfqpoint{1.317cm}{0.3cm}}{\pgfqpoint{1.283cm}{0.315cm}}{\pgfqpoint{1.246cm}{0.315cm}}
\pgfpathcurveto{\pgfqpoint{1.21cm}{0.315cm}}{\pgfqpoint{1.175cm}{0.3cm}}{\pgfqpoint{1.15cm}{0.275cm}}
\pgfpathcurveto{\pgfqpoint{1.124cm}{0.249cm}}{\pgfqpoint{1.11cm}{0.214cm}}{\pgfqpoint{1.11cm}{0.178cm}}
\pgfpathcurveto{\pgfqpoint{1.11cm}{0.141cm}}{\pgfqpoint{1.124cm}{0.107cm}}{\pgfqpoint{1.15cm}{0.081cm}}
\pgfpathcurveto{\pgfqpoint{1.175cm}{0.055cm}}{\pgfqpoint{1.21cm}{0.041cm}}{\pgfqpoint{1.246cm}{0.041cm}}
\pgfpathcurveto{\pgfqpoint{1.283cm}{0.041cm}}{\pgfqpoint{1.317cm}{0.055cm}}{\pgfqpoint{1.343cm}{0.081cm}}
\pgfpathcurveto{\pgfqpoint{1.369cm}{0.107cm}}{\pgfqpoint{1.383cm}{0.141cm}}{\pgfqpoint{1.383cm}{0.178cm}}
\pgfusepath{stroke}
\end{pgfscope}
\end{pgfscope}
\end{pgfscope}
\end{pgfscope}
\end{tikzpicture}}} \circ Y_{\varepsilon}
     \rangle_{\varepsilon} | \lesssim \lambda^2 \| \rho^{4 - 2 \sigma}
     \phi_{\varepsilon} \|_{L^{1, \varepsilon}} \| \rho^{\sigma}
     \tilde{X}_{\varepsilon}^{\!\resizebox{!}{.8em}{
\begin{tikzpicture}
\pgfpathmoveto{\pgfqpoint{0cm}{-0.035cm}}
\pgfpathlineto{\pgfqpoint{1.976cm}{-0.035cm}}
\pgfpathlineto{\pgfqpoint{1.976cm}{1.94cm}}
\pgfpathlineto{\pgfqpoint{0cm}{1.94cm}}
\pgfpathclose
\pgfusepath{clip}
\begin{pgfscope}
\begin{pgfscope}
\pgfpathmoveto{\pgfqpoint{0cm}{-0.035cm}}
\pgfpathlineto{\pgfqpoint{1.976cm}{-0.035cm}}
\pgfpathlineto{\pgfqpoint{1.976cm}{1.94cm}}
\pgfpathlineto{\pgfqpoint{0cm}{1.94cm}}
\pgfpathclose
\pgfusepath{clip}
\begin{pgfscope}
\begin{pgfscope}
\pgfsetdash{}{0cm}
\pgfsetlinewidth{0.818mm}
\pgfsetroundcap
\pgfsetroundjoin
\pgfsetmiterlimit{7.0}
\definecolor{eps2pgf_color}{gray}{0}\pgfsetstrokecolor{eps2pgf_color}\pgfsetfillcolor{eps2pgf_color}
\pgfpathmoveto{\pgfqpoint{0.117cm}{1.815cm}}
\pgfpathlineto{\pgfqpoint{0.682cm}{1.065cm}}
\pgfpathlineto{\pgfqpoint{1.246cm}{1.815cm}}
\pgfusepath{stroke}
\end{pgfscope}
\definecolor{eps2pgf_color}{gray}{0}\pgfsetstrokecolor{eps2pgf_color}\pgfsetfillcolor{eps2pgf_color}
\pgfpathmoveto{\pgfqpoint{0.273cm}{1.789cm}}
\pgfpathcurveto{\pgfqpoint{0.273cm}{1.825cm}}{\pgfqpoint{0.259cm}{1.86cm}}{\pgfqpoint{0.233cm}{1.886cm}}
\pgfpathcurveto{\pgfqpoint{0.207cm}{1.912cm}}{\pgfqpoint{0.173cm}{1.926cm}}{\pgfqpoint{0.137cm}{1.926cm}}
\pgfpathcurveto{\pgfqpoint{0.1cm}{1.926cm}}{\pgfqpoint{0.066cm}{1.912cm}}{\pgfqpoint{0.04cm}{1.886cm}}
\pgfpathcurveto{\pgfqpoint{0.014cm}{1.86cm}}{\pgfqpoint{0cm}{1.825cm}}{\pgfqpoint{0cm}{1.789cm}}
\pgfpathcurveto{\pgfqpoint{0cm}{1.753cm}}{\pgfqpoint{0.014cm}{1.718cm}}{\pgfqpoint{0.04cm}{1.692cm}}
\pgfpathcurveto{\pgfqpoint{0.066cm}{1.667cm}}{\pgfqpoint{0.1cm}{1.652cm}}{\pgfqpoint{0.137cm}{1.652cm}}
\pgfpathcurveto{\pgfqpoint{0.173cm}{1.652cm}}{\pgfqpoint{0.207cm}{1.667cm}}{\pgfqpoint{0.233cm}{1.692cm}}
\pgfpathcurveto{\pgfqpoint{0.259cm}{1.718cm}}{\pgfqpoint{0.273cm}{1.753cm}}{\pgfqpoint{0.273cm}{1.789cm}}
\pgfusepath{fill}
\pgfpathmoveto{\pgfqpoint{1.345cm}{1.765cm}}
\pgfpathcurveto{\pgfqpoint{1.345cm}{1.801cm}}{\pgfqpoint{1.331cm}{1.836cm}}{\pgfqpoint{1.305cm}{1.862cm}}
\pgfpathcurveto{\pgfqpoint{1.28cm}{1.887cm}}{\pgfqpoint{1.245cm}{1.902cm}}{\pgfqpoint{1.209cm}{1.902cm}}
\pgfpathcurveto{\pgfqpoint{1.172cm}{1.902cm}}{\pgfqpoint{1.138cm}{1.887cm}}{\pgfqpoint{1.112cm}{1.862cm}}
\pgfpathcurveto{\pgfqpoint{1.087cm}{1.836cm}}{\pgfqpoint{1.072cm}{1.801cm}}{\pgfqpoint{1.072cm}{1.765cm}}
\pgfpathcurveto{\pgfqpoint{1.072cm}{1.728cm}}{\pgfqpoint{1.087cm}{1.694cm}}{\pgfqpoint{1.112cm}{1.668cm}}
\pgfpathcurveto{\pgfqpoint{1.138cm}{1.642cm}}{\pgfqpoint{1.172cm}{1.628cm}}{\pgfqpoint{1.209cm}{1.628cm}}
\pgfpathcurveto{\pgfqpoint{1.245cm}{1.628cm}}{\pgfqpoint{1.28cm}{1.642cm}}{\pgfqpoint{1.305cm}{1.668cm}}
\pgfpathcurveto{\pgfqpoint{1.331cm}{1.694cm}}{\pgfqpoint{1.345cm}{1.728cm}}{\pgfqpoint{1.345cm}{1.765cm}}
\pgfusepath{fill}
\begin{pgfscope}
\pgfsetdash{}{0cm}
\pgfsetlinewidth{0.818mm}
\pgfsetroundcap
\pgfsetroundjoin
\pgfsetmiterlimit{7.0}
\pgfpathmoveto{\pgfqpoint{0.682cm}{1.065cm}}
\pgfpathlineto{\pgfqpoint{1.246cm}{0.315cm}}
\pgfpathlineto{\pgfqpoint{1.811cm}{1.065cm}}
\pgfusepath{stroke}
\end{pgfscope}
\pgfpathmoveto{\pgfqpoint{1.948cm}{1.065cm}}
\pgfpathcurveto{\pgfqpoint{1.948cm}{1.101cm}}{\pgfqpoint{1.933cm}{1.136cm}}{\pgfqpoint{1.907cm}{1.162cm}}
\pgfpathcurveto{\pgfqpoint{1.882cm}{1.187cm}}{\pgfqpoint{1.847cm}{1.202cm}}{\pgfqpoint{1.811cm}{1.202cm}}
\pgfpathcurveto{\pgfqpoint{1.775cm}{1.202cm}}{\pgfqpoint{1.74cm}{1.187cm}}{\pgfqpoint{1.714cm}{1.162cm}}
\pgfpathcurveto{\pgfqpoint{1.689cm}{1.136cm}}{\pgfqpoint{1.674cm}{1.101cm}}{\pgfqpoint{1.674cm}{1.065cm}}
\pgfpathcurveto{\pgfqpoint{1.674cm}{1.029cm}}{\pgfqpoint{1.689cm}{0.994cm}}{\pgfqpoint{1.714cm}{0.968cm}}
\pgfpathcurveto{\pgfqpoint{1.74cm}{0.942cm}}{\pgfqpoint{1.775cm}{0.928cm}}{\pgfqpoint{1.811cm}{0.928cm}}
\pgfpathcurveto{\pgfqpoint{1.847cm}{0.928cm}}{\pgfqpoint{1.882cm}{0.942cm}}{\pgfqpoint{1.907cm}{0.968cm}}
\pgfpathcurveto{\pgfqpoint{1.933cm}{0.994cm}}{\pgfqpoint{1.948cm}{1.029cm}}{\pgfqpoint{1.948cm}{1.065cm}}
\pgfusepath{fill}
\begin{pgfscope}
\pgfsetdash{}{0cm}
\pgfsetlinewidth{0.818mm}
\pgfsetmiterlimit{7.0}
\pgfpathmoveto{\pgfqpoint{1.246cm}{0.315cm}}
\pgfpathlineto{\pgfqpoint{1.244cm}{1.061cm}}
\pgfusepath{stroke}
\end{pgfscope}
\pgfpathmoveto{\pgfqpoint{1.38cm}{1.065cm}}
\pgfpathcurveto{\pgfqpoint{1.38cm}{1.101cm}}{\pgfqpoint{1.366cm}{1.136cm}}{\pgfqpoint{1.34cm}{1.162cm}}
\pgfpathcurveto{\pgfqpoint{1.315cm}{1.187cm}}{\pgfqpoint{1.28cm}{1.202cm}}{\pgfqpoint{1.244cm}{1.202cm}}
\pgfpathcurveto{\pgfqpoint{1.207cm}{1.202cm}}{\pgfqpoint{1.173cm}{1.187cm}}{\pgfqpoint{1.147cm}{1.162cm}}
\pgfpathcurveto{\pgfqpoint{1.121cm}{1.136cm}}{\pgfqpoint{1.107cm}{1.101cm}}{\pgfqpoint{1.107cm}{1.065cm}}
\pgfpathcurveto{\pgfqpoint{1.107cm}{1.029cm}}{\pgfqpoint{1.121cm}{0.994cm}}{\pgfqpoint{1.147cm}{0.968cm}}
\pgfpathcurveto{\pgfqpoint{1.173cm}{0.942cm}}{\pgfqpoint{1.207cm}{0.928cm}}{\pgfqpoint{1.244cm}{0.928cm}}
\pgfpathcurveto{\pgfqpoint{1.28cm}{0.928cm}}{\pgfqpoint{1.315cm}{0.942cm}}{\pgfqpoint{1.34cm}{0.968cm}}
\pgfpathcurveto{\pgfqpoint{1.366cm}{0.994cm}}{\pgfqpoint{1.38cm}{1.029cm}}{\pgfqpoint{1.38cm}{1.065cm}}
\pgfusepath{fill}
\begin{pgfscope}
\pgfsetdash{}{0cm}
\pgfsetlinewidth{0.818mm}
\pgfsetmiterlimit{4.0}
\pgfpathmoveto{\pgfqpoint{1.383cm}{0.178cm}}
\pgfpathcurveto{\pgfqpoint{1.383cm}{0.214cm}}{\pgfqpoint{1.369cm}{0.249cm}}{\pgfqpoint{1.343cm}{0.275cm}}
\pgfpathcurveto{\pgfqpoint{1.317cm}{0.3cm}}{\pgfqpoint{1.283cm}{0.315cm}}{\pgfqpoint{1.246cm}{0.315cm}}
\pgfpathcurveto{\pgfqpoint{1.21cm}{0.315cm}}{\pgfqpoint{1.175cm}{0.3cm}}{\pgfqpoint{1.15cm}{0.275cm}}
\pgfpathcurveto{\pgfqpoint{1.124cm}{0.249cm}}{\pgfqpoint{1.11cm}{0.214cm}}{\pgfqpoint{1.11cm}{0.178cm}}
\pgfpathcurveto{\pgfqpoint{1.11cm}{0.141cm}}{\pgfqpoint{1.124cm}{0.107cm}}{\pgfqpoint{1.15cm}{0.081cm}}
\pgfpathcurveto{\pgfqpoint{1.175cm}{0.055cm}}{\pgfqpoint{1.21cm}{0.041cm}}{\pgfqpoint{1.246cm}{0.041cm}}
\pgfpathcurveto{\pgfqpoint{1.283cm}{0.041cm}}{\pgfqpoint{1.317cm}{0.055cm}}{\pgfqpoint{1.343cm}{0.081cm}}
\pgfpathcurveto{\pgfqpoint{1.369cm}{0.107cm}}{\pgfqpoint{1.383cm}{0.141cm}}{\pgfqpoint{1.383cm}{0.178cm}}
\pgfusepath{stroke}
\end{pgfscope}
\end{pgfscope}
\end{pgfscope}
\end{pgfscope}
\end{tikzpicture}}} \|_{\mathscr{C} \hspace{.1em}^{-
     \kappa, \varepsilon}} \| \rho^{\sigma} Y_{\varepsilon} \|_{\mathscr{C}
     \hspace{.1em}^{2 \kappa, \varepsilon}} \lesssim \lambda^3 \| \rho
     \phi_{\varepsilon} \|_{L^{4, \varepsilon}} \| \mathbb{X}_{\varepsilon}
     \|^{6 + \vartheta} \]
  \[ \leqslant C_{\delta} \lambda^{11 / 3} \| \mathbb{X}_{\varepsilon} \|^{8 +
     \vartheta} + \delta \Upsilon_{\varepsilon} \leqslant (\lambda^3 +
     \lambda^4) C_{\delta}^{} \| \mathbb{X}_{\varepsilon} \|^{8 + \vartheta} +
     \delta \Upsilon_{\varepsilon} . \]
  Now,
  \[ \lambda^2 | \langle \rho^4 \phi_{\varepsilon}, (\tilde{b}_{\varepsilon} -
     b_{\varepsilon}) Y_{\varepsilon} \rangle_{\varepsilon} | \lesssim | \log
     t | \lambda^2 \| \rho^{4 - \sigma} \phi_{\varepsilon} \|_{L^{1,
     \varepsilon}} \| \rho^{\sigma} Y_{\varepsilon} \|_{L^{\infty,
     \varepsilon}} \lesssim | \log t |^{4 / 3} \lambda^{7 / 3} C_{\delta} \|
     \mathbb{X}_{\varepsilon} \|^{8 / 3} + \delta \Upsilon_{\varepsilon} . \]
  Next, for $\theta = \frac{1 - 5 \kappa}{1 - 2 \kappa}$,
  \[ \lambda^2 | \langle \rho^4 \phi_{\varepsilon}, \bar{C}_{\varepsilon}
     (Y_{\varepsilon}, 3 \llbracket X_{\varepsilon}^2 \rrbracket, 3 \llbracket
     X_{\varepsilon}^2 \rrbracket) \rangle_{\varepsilon} | \lesssim \lambda^2
     \| \rho^{4 - 3 \sigma} \phi_{\varepsilon} \|_{B^{2 \kappa,
     \varepsilon}_{1, 1}} \| \rho^{\sigma} Y_{\varepsilon} \|_{\mathscr{C}
     \hspace{.1em}^{2 \kappa, \varepsilon}} \| \rho^{\sigma} \llbracket
     X_{\varepsilon}^2 \rrbracket \|^2_{\mathscr{C} \hspace{.1em}^{- 1 -
     \kappa, \varepsilon}} \]
  \[ \lesssim \lambda^3 \| \rho \phi_{\varepsilon} \|_{L^{4,
     \varepsilon}}^{\theta} \| \rho^2 \phi_{\varepsilon} \|_{H^{1 - 2 \kappa,
     \varepsilon}}^{1 - \theta} \| \mathbb{X}_{\varepsilon} \|^{6 + \vartheta}
     \leqslant \lambda^{(12 - \theta) / (2 + \theta)} C_{\delta} \|
     \mathbb{X}_{\varepsilon} \|^{8 + \vartheta} + \delta
     \Upsilon_{\varepsilon} \]
  \[ \leqslant (\lambda^3 + \lambda^4) C_{\delta}^{} \|
     \mathbb{X}_{\varepsilon} \|^{8 + \vartheta} + \delta
     \Upsilon_{\varepsilon} . \]
  At last, we have
  \[ \lambda^2 \left| \left\langle \rho^4 \phi_{\varepsilon}, - 3 \llbracket
     X_{\varepsilon}^2 \rrbracket \circ \LL_{\varepsilon}^{- 1} \left( 3
     \UU^{\varepsilon}_{\leqslant} \llbracket X_{\varepsilon}^2 \rrbracket
     \succ Y_{\varepsilon} \right) \right\rangle_{\varepsilon} \right| \]
  \[ \lesssim \lambda^2 \| \rho^{4 - 3 \sigma} \phi_{\varepsilon} \|_{L^{1,
     \varepsilon}} \| \rho^{\sigma} Y_{\varepsilon} \|_{L^{\infty,
     \varepsilon}} \| \rho^{\sigma} \llbracket X_{\varepsilon}^2 \rrbracket
     \|_{\mathscr{C} \hspace{.1em}^{- 1 - \kappa, \varepsilon}} \left\|
     \rho^{\sigma} \UU^{\varepsilon}_{\leqslant} \llbracket X_{\varepsilon}^2
     \rrbracket \right\|_{\mathscr{C} \hspace{.1em}^{- 1 + 2 \kappa,
     \varepsilon}} \]
  \[ \lesssim \lambda^3 \| \rho^{4 - 3 \sigma} \phi_{\varepsilon} \|_{L^{1,
     \varepsilon}} \| \mathbb{X}_{\varepsilon} \|^{4 + \vartheta} \leqslant
     \lambda^{11 / 3} C_{\delta} \| \mathbb{X}_{\varepsilon} \|^{16 / 3 +
     \vartheta} + \delta \Upsilon_{\varepsilon} \leqslant (\lambda^3 +
     \lambda^4) C_{\delta}^{} \| \mathbb{X}_{\varepsilon} \|^{8 + \vartheta} +
     \delta \Upsilon_{\varepsilon} \]
  This concludes the estimation of $\langle \rho^4 \phi_{\varepsilon},
  \lambda^2 Z_{\varepsilon} \rangle_{\varepsilon}$ giving us
  \[ | \langle \rho^4 \phi_{\varepsilon}, \lambda^2 Z_{\varepsilon}
     \rangle_{\varepsilon} | \leqslant (\lambda^2 + \lambda^4) C_{\delta}^{}
     \| \mathbb{X}_{\varepsilon} \|^{8 + \vartheta} + \delta
     \Upsilon_{\varepsilon} . \]
  Finally, we arrive to the additional term introduced by the localization. Using \eqref{eq:bound-X3} we obtain
  \[ | \langle \rho^4 \phi_{\varepsilon}, - \lambda \llbracket X_{M,
     \varepsilon}^3 \rrbracket_{\leqslant} \rangle_{\varepsilon} | \lesssim
     \lambda \| \rho \phi_{\varepsilon} \|_{L^{4, \varepsilon}} \|
     \rho^{\sigma} \llbracket X_{M, \varepsilon}^3 \rrbracket_{\leqslant}
     \|_{C_T L^{\infty, \varepsilon}} \lesssim \lambda \| \rho
     \phi_{\varepsilon} \|_{L^{4, \varepsilon}} 2^{K (3 / 2 + \kappa)} \|
     \mathbb{X}_{\varepsilon} \|^3 \]
  \[ \leqslant \lambda C_{\delta}^{} \| \mathbb{X}_{\varepsilon} \|^{8 +
     \vartheta} + \delta \Upsilon_{\varepsilon}, \]
  where we also see that the power $8+\vartheta$ is optimal for this decomposition.
\end{proof}

Let $\langle \phi_{\varepsilon} \rangle \assign (1 + \| \rho^2
\phi_{\varepsilon} \|_{L^{2, \varepsilon}}^2)^{1 / 2}$ and $\langle
\varphi_{\varepsilon} \rangle_{\ast} \assign (1 + \| \rho^2
\varphi_{\varepsilon} \|_{H^{- 1 / 2 - 2 \kappa, \varepsilon}}^2)^{1 / 2}$.
With Lemma~\ref{lemma:bounds-rhs1-int} in hand we can proceed to the proof of the stretched exponential
integrability.

\begin{proposition}
  \label{lemma:int-bound}
  There exists an $\alpha
  > 0$, $0 < C < 1$ and  $\upsilon=O(\kappa)>0$  such that for every $\beta>0$ 
  \[ \partial_t e^{\beta \langle t \phi_{\varepsilon} \rangle^{1 - \upsilon}}
     + \alpha e^{\beta \langle t \phi_{\varepsilon} \rangle^{1 - \upsilon}} (1-\upsilon)\beta
     \langle t \phi_{\varepsilon} \rangle^{- \upsilon - 1} t^2
     \Upsilon_{\varepsilon} \lesssim 1 + e^{(\beta / C) \|
     \mathbb{X}_{\varepsilon} \|^2}. \]
 Consequently, for any accumulation point $\nu$ we have
  \[ \int_{\mathcal{S}'(\mathbb{R}^{3})} e^{\beta \langle \varphi_{} \rangle_{\ast}^{1 - \upsilon}} \nu
     (\mathd \varphi) < \infty \]
provided  $\beta>0$ is sufficiently small.
\end{proposition}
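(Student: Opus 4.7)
The starting point is the refined estimate in Lemma~\ref{lemma:bounds-rhs1-int}, which plugged into \eqref{eq:en12-int} and after absorbing the $\delta\Upsilon_\varepsilon$ contributions into the left-hand side yields the sharp inequality
\begin{equation*}
\partial_t a + \alpha_0 \Upsilon_\varepsilon \leqslant C_\lambda(1+|\log t|^{4/3})\|\mathbb{X}_\varepsilon\|^{8+\vartheta},\qquad a(t):=\|\rho^2\phi_\varepsilon(t)\|^2_{L^{2,\varepsilon}},
\end{equation*}
with $\alpha_0=\alpha_0(m^2)>0$ and $C_\lambda=\lambda+\lambda^{7/3}+\lambda^5$. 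The power $8+\vartheta$ is exactly what can match the Gaussian integrability $\mathbb{E}[e^{\beta\|\mathbb{X}_\varepsilon\|^2}]<\infty$ of \eqref{eq:exp-int}, and this is the reason for the careful optimization performed in Section~\ref{s:exp}.

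To smear out the initial datum and make the logarithm in $t$ integrable, I would work with the rescaled quantity $f(t):=\langle t\phi_\varepsilon\rangle^2=1+t^2 a(t)$. Since the integrability of the weight (Lemma~\ref{lem:15}) gives $\|\rho^2\phi_\varepsilon\|_{L^{2,\varepsilon}}\lesssim \|\rho\phi_\varepsilon\|_{L^{4,\varepsilon}}$, the coercivity $\Upsilon_\varepsilon\gtrsim \lambda a^2$ is available. A simple Young inequality then absorbs the initial-layer term: $2ta\leqslant \tfrac{\alpha_0}{2}t^2\Upsilon_\varepsilon+\tfrac{c}{\lambda}$. Combining with $\partial_t f=2ta+t^2\partial_t a$ produces
\begin{equation*}
\partial_t f+\tfrac{\alpha_0}{2}t^2\Upsilon_\varepsilon\leqslant \tfrac{c}{\lambda}+C_\lambda(1+|\log t|^{4/3})t^2\|\mathbb{X}_\varepsilon\|^{8+\vartheta}.
\end{equation*}
Setting $A:=e^{\beta f^{(1-\upsilon)/2}}$ and differentiating through $\partial_t A=\tfrac{\beta(1-\upsilon)}{2}f^{-(1+\upsilon)/2}A\,\partial_t f$ gives exactly the stated LHS of the inequality with $\alpha=\alpha_0/4$, and a remainder proportional to $\beta(1-\upsilon)f^{-(1+\upsilon)/2}A\bigl[\lambda^{-1}+C_\lambda(1+|\log t|^{4/3})\|\mathbb{X}_\varepsilon\|^{8+\vartheta}t^2\bigr]$ that needs to be controlled by $1+e^{(\beta/C)\|\mathbb{X}_\varepsilon\|^2}$.

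The heart of the argument is a two-regime analysis of this remainder. In the regime $f\geqslant F^{*}(t,\|\mathbb{X}_\varepsilon\|)^2$ with $F^{*}\sim t\|\mathbb{X}_\varepsilon\|^{2+\vartheta/4}/\lambda^{1/4}$, the quartic coercivity $t^2\Upsilon_\varepsilon\gtrsim \lambda f^2/t^2$ makes the LHS term $f^{-(1+\upsilon)/2}t^2\Upsilon_\varepsilon A\gtrsim \lambda t^{-2} f^{(3-\upsilon)/2}A$ dominate twice the bad part on the RHS, which can therefore be absorbed. In the complementary regime $f\leqslant (F^{*})^2$, the stretched exponential $A$ itself is bounded by $e^{\beta (F^{*})^{1-\upsilon}}$; choosing $\upsilon>\vartheta/8$ makes $(2+\vartheta/4)(1-\upsilon)<2$, so that $\beta(F^{*})^{1-\upsilon}\leqslant \varepsilon_0\|\mathbb{X}_\varepsilon\|^2+C_{\lambda,\beta,T}$, and combined with $\|\mathbb{X}_\varepsilon\|^{8+\vartheta}\leqslant C_{\varepsilon_0}e^{\varepsilon_0\|\mathbb{X}_\varepsilon\|^2}$ the remainder is controlled by $C(1+e^{(\beta/C)\|\mathbb{X}_\varepsilon\|^2})$ with $C:=\beta/(2\varepsilon_0)<1$ for $\beta$ small.

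For the probabilistic conclusion, I would integrate the differential inequality between $0$ and $T$ and take expectation, using \eqref{eq:exp-int} at level $\beta/C$ (which forces $\beta$ small) to conclude $\sup_{\varepsilon,M}\mathbb{E}[e^{\beta\langle T\phi_{M,\varepsilon}(T)\rangle^{1-\upsilon}}]<\infty$. To transfer to $\nu$ I use the stationary decomposition $\varphi_{M,\varepsilon}=X_{M,\varepsilon}-\lambda\tthreeone{X_{M,\varepsilon}}+\zeta_{M,\varepsilon}$ and the subadditivity $\langle\varphi\rangle_{*}^{1-\upsilon}\leqslant C(\|X\|_{\CC^{-1/2-\kappa}}^{1-\upsilon}+\lambda^{1-\upsilon}\|\tthreeone{X}\|_{\CC^{1/2-\kappa}}^{1-\upsilon}+\|\zeta\|_{L^2}^{1-\upsilon})$ combined with Hölder on the three exponential factors, stretched exponential integrability of the stochastic terms, the extension operator bound of Lemma~\ref{lem:ext}, and Fatou's lemma along the tightness subsequence of Theorem~\ref{thm:main}. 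The main obstacle is the tight calibration of $\upsilon$: the optimal power $8+\vartheta$ of $\|\mathbb{X}_\varepsilon\|$ in Lemma~\ref{lemma:bounds-rhs1-int} cannot be improved, while the quartic dissipation $\Upsilon_\varepsilon\gtrsim \lambda a^2$ is the strongest available, and their interplay forces $\upsilon$ to be at least of order $\vartheta=O(\kappa)$, limiting the achievable stretched exponent to $1-O(\kappa)$.
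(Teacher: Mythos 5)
Your overall strategy -- multiply by $t^2$ to kill the initial data, pass to the stretched exponential of $\langle t\phi_\varepsilon\rangle^{1-\upsilon}$, and split into a regime where the quartic dissipation dominates and a regime where $\langle t\phi_\varepsilon\rangle$ is controlled by $\|\mathbb{X}_\varepsilon\|$ -- is the paper's strategy, but the two implementations of the dichotomy differ, and yours has a genuine gap. The paper splits according to whether $\|\mathbb{X}_\varepsilon\|^2$ is dominated by $\varsigma\|t\rho\phi_\varepsilon\|_{L^{4,\varepsilon}}^{1-\upsilon}$. In the case where it is, $\|\mathbb{X}_\varepsilon\|^{8+\vartheta}$ is replaced by $\varsigma^{4+\vartheta/2}\,t^{(4+\vartheta/2)(1-\upsilon)}\|\rho\phi_\varepsilon\|_{L^{4,\varepsilon}}^{(4+\vartheta/2)(1-\upsilon)}$, and the exponent condition $(4+\vartheta/2)(1-\upsilon)=4$ makes this literally $\varsigma^{4+\vartheta/2}t^4\|\rho\phi_\varepsilon\|_{L^{4,\varepsilon}}^4$. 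The point is the exact match, in both power and norm, with the dissipation $t^2\lambda\|\rho\phi_\varepsilon\|_{L^{4,\varepsilon}}^4$ that appears on the left-hand side: the surplus factor $t^4(1+|\log t|^{4/3})/t^2$ is bounded on $(0,1)$ and the constant $\varsigma$ can be taken small, so absorption is immediate and uniform in $t$.

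You instead split on $f=\langle t\phi_\varepsilon\rangle^2=1+t^2\|\rho^2\phi_\varepsilon\|_{L^{2,\varepsilon}}^2$ and use the derived $L^2$-coercivity $t^2\Upsilon_\varepsilon\gtrsim \lambda f^2/t^2$. Two things break in the large-$f$ regime. First, the coercivity is really $t^2\Upsilon_\varepsilon\gtrsim\lambda(f-1)^2/t^2$ (since $t^2 a^2=(f-1)^2/t^2$ with $a=\|\rho^2\phi_\varepsilon\|_{L^{2,\varepsilon}}^2$); this is vacuous near $f=1$, and you do not split off the band $1\leqslant f\leqslant 2$, which is always part of your large-$f$ regime whenever $(F^*)^2<1$, i.e.\ whenever $t$ is small. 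Second, and more seriously, your threshold $F^*\sim t\|\mathbb{X}_\varepsilon\|^{2+\vartheta/4}/\lambda^{1/4}$ ignores the $(1+|\log t|^{4/3})$ factor coming from Lemma~\ref{lemma:bounds-rhs1-int}. In the large-$f$ regime the absorption you need is $\lambda f^2/t^2\gtrsim C_\lambda(1+|\log t|^{4/3})t^2\|\mathbb{X}_\varepsilon\|^{8+\vartheta}+c\lambda^{-1}$, and with $f$ only guaranteed to exceed $(F^*)^2\propto t^2\|\mathbb{X}_\varepsilon\|^{4+\vartheta/2}$, the left-hand side is $\propto t^2\|\mathbb{X}_\varepsilon\|^{8+\vartheta}$; the log factor on the right blows up as $t\to 0$ while the ratio of the two sides is order one, so the absorption fails near $t=0$. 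The Young constant $c\lambda^{-1}$ is also not absorbed: it carries no $t^2$ factor, so $t^2\|\mathbb{X}_\varepsilon\|^{8+\vartheta}\gtrsim \lambda^{-1}$ also fails for small $t$. These are not cosmetic: they are exactly why the paper's dichotomy is posed in terms of $\|t\rho\phi_\varepsilon\|_{L^{4,\varepsilon}}$, which pairs with the dissipation and gains the extra $t^2$ needed to dominate the logarithm. Both gaps can be repaired (split off $f\leqslant 2$ by hand, and insert the $(1+|\log t|^{4/3})^{1/2}$ factor into $F^*$, verifying that the small-$f$ bound on $A$ is still subquadratic in $\|\mathbb{X}_\varepsilon\|$), but as written your large-$f$ case does not close. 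The small-$f$ case and the passage to the measure $\nu$ via H\"older on the stationary decomposition are sound and consistent with the paper.
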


\begin{proof}
We apply \eqref{eq:en12-int} and Lemma~\ref{lemma:bounds-rhs1-int} to obtain
  \[ \langle t \phi_{\varepsilon} \rangle^{1+ \upsilon} \frac{\partial_t e^{\beta \langle t \phi_{\varepsilon} \rangle^{1 - \upsilon}}}{(1 -
     \upsilon) \beta}
     = e^{\beta \langle t \phi_{\varepsilon} \rangle^{1 - \upsilon}} 
     \frac{1}{2} \partial_t (t^2 \| \rho^2 \phi_{\varepsilon} \|_{L^{2,
     \varepsilon}}^2) \]
  \[ = e^{\beta \langle t \phi_{\varepsilon} \rangle^{1 - \upsilon}}
     [t^2 (- \Upsilon_{\varepsilon} + \Theta_{\rho^4, \varepsilon} +
     \Psi_{\rho^4, \varepsilon}+\langle \rho^4 \phi_{\varepsilon}, - \lambda \llbracket X_{
     \varepsilon}^3 \rrbracket_{\leqslant} \rangle_{\varepsilon}) + t \| \rho^2 \phi_{\varepsilon} \|_{L^{2,
     \varepsilon}}^2] \]
  \[ \leqslant e^{\beta \langle t \phi_{\varepsilon} \rangle^{1 - \upsilon}}
     [t^2 (- \Upsilon_{\varepsilon} + \Theta_{\rho^4, \varepsilon} +
     \Psi_{\rho^4, \varepsilon}+\langle \rho^4 \phi_{\varepsilon}, - \lambda \llbracket X_{
     \varepsilon}^3 \rrbracket_{\leqslant} \rangle_{\varepsilon}) + \delta t^2\lambda \| \rho \phi_{\varepsilon}
     \|_{L^{4, \varepsilon}}^4 + C_{\delta,\lambda^{-1}}] \]
  \[ \leqslant e^{\beta \langle t \phi_{\varepsilon} \rangle^{1 - \upsilon}}
      [- t^2 (1 - 2 \delta) \Upsilon_{\varepsilon} + C_{\lambda} t^2 (| \log
     t|^{4 / 3} + 1) \| \mathbb{X}_{\varepsilon} \|^{8 + \vartheta} +
     C_{\delta,\lambda^{-1}}], \]
     where by writing $C_{\delta,\lambda^{-1}}$ we point out that the constant is not uniform over small $\lambda$.
  Therefore by absorbing the constant term $C_{\delta,\lambda^{-1}}$ in $\|
  \mathbb{X}_{\varepsilon} \|^{8 + \vartheta}$ we have
  \begin{equation}
    \begin{array}{l}
      \partial_t e^{\beta \langle t \phi_{\varepsilon} \rangle^{1 - \upsilon}}
      + e^{\beta \langle t \phi_{\varepsilon} \rangle^{1 - \upsilon}} (1 -
      \upsilon) \beta \langle t \phi_{\varepsilon} \rangle^{- \upsilon - 1} (1
      - 2 \delta) t^2 \Upsilon_{\varepsilon}\\
      \qquad \leqslant C_{\delta,\lambda^{-1}} e^{\beta \langle t
      \phi_{\varepsilon} \rangle^{1 - \upsilon}} (1 - \upsilon) \beta \langle
      t \phi_{\varepsilon} \rangle^{- \upsilon - 1} t^2 (| \log t|^{4 / 3} +
      1) \| \mathbb{X}_{\varepsilon} \|^{8 + \vartheta}
    \end{array} \label{eq:int-bound}
  \end{equation}
  Now we can have two situations at any given time, either $\|
  \mathbb{X}_{\varepsilon} \|^2 \leqslant \varsigma \|t \rho
  \phi_{\varepsilon} \|_{L^{4, \varepsilon}}^{1 - \upsilon}$ or $\|
  \mathbb{X}_{\varepsilon} \|^2 > \varsigma \|t \rho \phi_{\varepsilon}
  \|_{L^{4, \varepsilon}}^{1 - \upsilon}$ for some fixed and small $\varsigma
  > 0$. In the first case the right hand side of~{\eqref{eq:int-bound}} is bounded by
  \[ C_{\delta,\lambda^{-1}} e^{\beta \langle t \phi_{\varepsilon} \rangle^{1 -
     \upsilon}} (1 - \upsilon)  \beta \langle t \phi_{\varepsilon} \rangle^{-
     \upsilon - 1} \varsigma^{4 + \vartheta / 2} t^2 (| \log t|^{4 / 3} + 1)
     \|t \rho \phi_{\varepsilon} \|_{L^{4, \varepsilon}}^{(4 + \vartheta / 2)
     (1 - \upsilon)}, \]
  and we can choose $\upsilon = \upsilon (\kappa)$ so that $(4 + \vartheta /
  2) (1 - \upsilon) = 4$ and by taking $\varsigma$ small (depending on $\delta, \lambda$ through $C_{\delta,\lambda^{-1}}$) we can absorb this
  term into the left hand side since for $t \in (0, 1)$ it will be bounded by
  \[ C_{\delta,\lambda^{-1}} e^{\beta \langle t \phi_{\varepsilon} \rangle^{1 - \upsilon}}(1 - \upsilon)  \beta \langle t \phi_{\varepsilon} \rangle^{-
     \upsilon - 1}
     \varsigma^{4 + \vartheta / 2} \tmcolor{red}{} t^2 \| \rho
     \phi_{\varepsilon} \|_{L^{4, \varepsilon}}^4 . \]
  In the case $\| \mathbb{X}_{\varepsilon} \|^2 > \varsigma \|t \rho
  \phi_{\varepsilon} \|_{L^{4, \varepsilon}}^{1 - \upsilon}$ we have
  \[ \| \mathbb{X}_{\varepsilon} \|^2 > \varsigma \|t \rho \phi_{\varepsilon}
     \|_{L^{4, \varepsilon}}^{1 - \upsilon} \gtrsim \varsigma \|t \rho^2
     \phi_{\varepsilon} \|_{L^{2, \varepsilon}}^{1 - \upsilon} \gtrsim
     \varsigma (\langle t \phi_{\varepsilon} \rangle^{1 - \upsilon} - 1), \]
  provided $\rho$ is chosen to be of sufficient decay, and therefore we simply
  bound the right hand side of {\eqref{eq:int-bound}} by
  \[ \lesssim C_{\delta,\lambda^{-1}} e^{(\beta / C \varsigma) \|
     \mathbb{X}_{\varepsilon} \|^2} \| \mathbb{X}_{\varepsilon} \|^{8 +
     \vartheta} \lesssim 1 + e^{(2 \beta / C \varsigma) \|
     \mathbb{X}_{\varepsilon} \|^2} . \]
    The first claim is proven.
     
  It remains to prove the bound for $\varphi_{\varepsilon}$. By H{\"o}lder's inequality, we have
  \[ \mathbb{E} [e^{\beta \langle \varphi_{\varepsilon} (0) - X_{\varepsilon}
     (0) \rangle^{1 - \upsilon}}] =\mathbb{E} [e^{\beta \langle
     \varphi_{\varepsilon} (1) - X_{\varepsilon} (1) \rangle^{1 - \upsilon}}]
     \leqslant \mathbb{E} [e^{ \beta \tmcolor{red}{} \langle Y_{\varepsilon}
     (1) \rangle^{1 - \upsilon} +  \beta \tmcolor{red}{} \langle
     \phi_{\varepsilon} (1) \rangle^{1 - \upsilon}}] \]
  \[ \leqslant [\mathbb{E} [e^{2  \beta \langle Y_{\varepsilon} (1)
     \rangle^{1 - \upsilon}}]]^{1 / 2} [\mathbb{E} [e^{2 \beta  \langle
     \phi_{\varepsilon} (1) \rangle^{1 - \upsilon}}]]^{1 / 2} \]
  and we observe that $\langle Y_{\varepsilon} (1) \rangle^{1 - \upsilon}
  \lesssim 1 + \| \mathbb{X}_{\varepsilon} \|^2$ so the first term on the
  right hand side is integrable uniformly in $\varepsilon$ by {\eqref{eq:exp-int}}. On
  the other hand, using Lemma~\ref{lemma:int-bound} we have
  \[ \mathbb{E} [e^{2  \beta \langle t \phi_{\varepsilon} (t) \rangle^{1 -
     \upsilon}}] + \int_0^t \mathbb{E} [\alpha e^{2  \beta \langle s
     \phi_{\varepsilon} (s) \rangle^{1 - \upsilon}} (1-\upsilon)2\beta \langle s
     \phi_{\varepsilon} (s) \rangle^{- \upsilon - 1} s^2
     \Upsilon_{\varepsilon} (s)] \mathd s \lesssim \mathbb{E} [1 + e^{(2 \beta
     / C) \| \mathbb{X}_{\varepsilon} \|^2}] \]
  and therefore
  \[ \mathbb{E} [e^{2  \beta \langle \phi_{\varepsilon} (1) \rangle^{1 -
     \upsilon}}] \lesssim \mathbb{E} [1 + e^{(2 \beta / C) \|
     \mathbb{X}_{\varepsilon} \|^2}] . \]
  We conclude that
  \[ \sup_{\varepsilon\in\mathcal A}\mathbb{E} [e^{\beta \langle \varphi_{\varepsilon} (0) - X_{\varepsilon}
     (0) \rangle^{1 - \upsilon}}] \lesssim [\mathbb{E} [e^{2 \beta  (1+\|\mathbb X_{\varepsilon}\|^{2})}]]^{1 / 2} [\mathbb{E} [1 +
     e^{(2 \beta / C) \| \mathbb{X}_{\varepsilon} \|^2}]]^{1 / 2} < \infty \]
  uniformly in $\varepsilon$ by {\eqref{eq:exp-int}}, from which the claim
   follows.
\end{proof}

\section{The Osterwalder--Schrader axioms and non-Gaussianity}\label{s:ax}

The goal of this section is to establish several important properties of any
limit measure $\nu$ obtained in the previous section. 
Let us first introduce Osterwalder and
Schrader axioms~{\cite{osterwalder_axioms_1973,osterwalder_axioms_1975}} in the stronger variant of Eckmann and Epstein~\cite{eckmann_time_ordered_1979} for the family of distributions $(S_n \in \mathcal{S}'
(\mathbb{R}^{3n}))_{n \in\mathbb{N}_{0}}$.

\begin{description}
  \item[OS0] (Distribution property) It holds $S_0 = 1$. There is a Schwartz
  norm $\| \cdot \|_s$ on $\mathcal{S}' (\mathbb{R}^3)$ and $\beta > 0$ such
  that for all $n \in\mathbb{N}$ and $f_1, \ldots, f_n \in \mathcal{S}
  (\mathbb{R}^3)$
  \begin{equation}
    | S_n (f_1 \otimes \ldots \otimes f_n) | \leqslant (n!)^{\beta} \prod_{i =
    1}^n \| f_i \|_s . \label{eq:os-reg}
  \end{equation}
  \item[OS1] (Euclidean invariance) For each $n\in\mathbb{N}$, $g = (a, R) \in
  \mathbb{R}^3 \times \mathrm{O} (3)$, $f_1, \ldots, f_n \in \mathcal{S}
  (\mathbb{R}^3)$
  \[ S_n ((a, R) .f_1 \otimes \ldots \otimes (a, R) .f_n) = S_n (f_1 \otimes
     \ldots \otimes f_n), \]
  where $(a, R) .f_n (x) = f_n (a + R x)$ and where $\mathrm{O}(3)$ is the orthogonal
  group of $\mathbb{R}^3$.
  
  \item[OS2] (Reflection positivity) Let  $\mathbb{R}^{3 n}_{+} = \{ (x^{(1)}, \ldots, x^{(n)}) \in
(\mathbb{R}^3)^n : x_1^{(j)}>0, j=1,\dots,n \}$ and
\[ \mathcal{S}_{\mathbb{C}} (\mathbb{R}^{3 n}_{+}) \assign \{ f \in \mathcal{S}
   (\mathbb{R}^{3 n};\mathbb{C}) : \tmop{supp} (f) \subset \mathbb{R}^{3 n}_{+} \} . \]
 For all sequences $(f_n \in
  \mathcal{S}_{\mathbb{C}} (\mathbb{R}^{3 n}_{+}))_{n \in\mathbb{N}_{0}}$ with
  finitely many nonzero elements
  \begin{equation}
    \sum_{n, m \in\mathbb{N}_{0}} S_{n + m} (\overline{\Theta f_n} \otimes f_m)
    \geqslant 0, \label{eq:OS2}
  \end{equation}
  where $\Theta f_n (x^{(1)}, \ldots, x^{(n)}) = f (\theta x^{(1)}, \ldots,
  \theta x^{(n)})$ and $\theta (x_1, x_{2}, x_3) = (- x_1, x_{2}, x_3)$ is
  the reflection with respect to the plane $x_1 = 0$.
  
  \item[OS3] (Symmetry) For all $n \in\mathbb{N}$, $f_1, \ldots, f_n \in
  \mathcal{S} (\mathbb{R}^3)$ and $\pi$ a permutation of $n$ elements
  \[ S_n (f_1 \otimes \cdots \otimes f_n) = S_n (f_{\pi (1)} \otimes \cdots
     \otimes f_{\pi (n)}) . \]
\end{description}

The reconstruction theorem of Eckmann and Epstein (Theorem 2 and Corollary 3 in~\cite{eckmann_time_ordered_1979}) asserts that
distributions $(S_n)_{n\in\mathbb{N}_{0}}$ which satisfy OS0--3 are the
 Schwinger functions of a uniquely determined system of time-ordered products of relativistic quantum fields.  Note that if Euclidean invariance in OS1 is replaced with translation invariance with respect to the first coordinate (the Euclidean time), then the reconstruction theorem gives anyway a quantum theory with a unitary time evolution, possibly lacking the full Poincar\'e invariance.
  
For any measure $\mu$ on $\mathcal{S}' (\mathbb{R}^3)$ we define $S_n^{\mu}
\in (\mathcal{S}' (\mathbb{R}^3))^{\otimes n}$ as
\[ S_n^{\mu} (f_1 \otimes \cdots \otimes f_n) \assign \int_{\mathcal{S}'
   (\mathbb{R}^3)} \varphi (f_1) \cdots \varphi (f_n) \mu (\mathd \varphi),
   \qquad n\in \mathbb{N}, f_1, \ldots, f_n \in \mathcal{S} (\mathbb{R}^3) . \]
In this case OS3 is trivially satisfied. 
Along this section we will prove
that, for any accumulation point $\nu$, the functions $(S^{\nu}_n)_n$ satisfy
additionally OS0, OS2 and OS1 with the exception of invariance with respect to $\tmop{SO}
(3)$ (but including reflections) and moreover that $\nu$  is not a Gaussian measure.

\subsection{Distribution property}
\label{ss:OS0}

Here we are concerned with proving the bound~{\eqref{eq:os-reg}} for
correlation functions of $\nu$.

\begin{proposition}
  \label{prop:os-reg-bound}There exists $\beta > 1$ and $K > 0$ such that  any limit measure $\nu$ constructed via the procedure in Section~\ref{sec:tight} satisfies: for
  all $n \in\mathbb{N}$ and all $f_1, \ldots, f_n \in H^{1 / 2 +2 \kappa}
  (\rho^{- 2})$ we have
  \[ | \mathbb{E}_{\nu} [\varphi (f_1) \cdots \varphi (f_n)] | \leqslant K^n
     (n!)^{\beta} \prod_{i = 1}^n \| f_i \|_{H^{1 / 2 +2 \kappa} (\rho^{- 2})}
     . \]
     In particular, it satisfies {\em OS0}.
\end{proposition}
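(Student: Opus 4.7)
The plan is to reduce this to the stretched exponential integrability bound established in Proposition~\ref{lemma:int-bound}. First I would use the duality pairing from Lemma~\ref{lem:dual2}: since $H^{1/2+2\kappa}(\rho^{-2})$ is (included in) the topological dual of $H^{-1/2-2\kappa}(\rho^{2})$, one has
\[
|\varphi(f_{i})|\;\leqslant\; \|\varphi\|_{H^{-1/2-2\kappa}(\rho^{2})}\,\|f_{i}\|_{H^{1/2+2\kappa}(\rho^{-2})}
\]
for $\nu$-a.e.\ $\varphi\in \mathcal{S}'(\mathbb R^{3})$. Multiplying over $i=1,\dots,n$ and taking expectation reduces the claim to a moment bound of the form
\[
\mathbb{E}_{\nu}\bigl[\langle\varphi\rangle_{\ast}^{n}\bigr]\;\leqslant\; K^{n}(n!)^{\beta},
\qquad
\langle\varphi\rangle_{\ast}=(1+\|\varphi\|_{H^{-1/2-2\kappa}(\rho^{2})}^{2})^{1/2},
\]
with $\beta>1$ and $K>0$ independent of $n$.

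Next I would deduce this moment bound from the stretched exponential integrability of $\langle\varphi\rangle_{\ast}^{1-\upsilon}$ under $\nu$ given by Proposition~\ref{lemma:int-bound}. Indeed, writing $M=\mathbb{E}_{\nu}[e^{\beta_{0}\langle\varphi\rangle_{\ast}^{1-\upsilon}}]<\infty$ and expanding the exponential,
\[
M=\sum_{k\geqslant 0}\frac{\beta_{0}^{k}}{k!}\,\mathbb{E}_{\nu}\bigl[\langle\varphi\rangle_{\ast}^{k(1-\upsilon)}\bigr],
\]
so each fractional moment satisfies $\mathbb{E}_{\nu}[\langle\varphi\rangle_{\ast}^{k(1-\upsilon)}]\leqslant M\,k!\,\beta_{0}^{-k}$. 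For general $n\in\mathbb{N}$ one takes $k=\lceil n/(1-\upsilon)\rceil$ and applies Jensen's inequality to interpolate, obtaining
\[
\mathbb{E}_{\nu}\bigl[\langle\varphi\rangle_{\ast}^{n}\bigr]\;\leqslant\;\bigl(M k!\,\beta_{0}^{-k}\bigr)^{n/(k(1-\upsilon))}\;\leqslant\; K^{n}(n!)^{1/(1-\upsilon)},
\]
using Stirling's formula. This yields the required bound with $\beta=1/(1-\upsilon)>1$.

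To conclude the verification of \textbf{OS0}, I would finally replace the weighted Besov norm by a Schwartz norm. Since $\rho^{-2}(x)=\langle hx\rangle^{2\nu}$ grows polynomially, the embedding $\mathcal{S}(\mathbb{R}^{3})\hookrightarrow H^{1/2+2\kappa}(\rho^{-2})$ is continuous, so there is a Schwartz seminorm $\|\cdot\|_{s}$ with $\|f_{i}\|_{H^{1/2+2\kappa}(\rho^{-2})}\leqslant\|f_{i}\|_{s}$, and the required estimate~\eqref{eq:os-reg} follows. There is no real obstacle here beyond bookkeeping; the entire content of the statement is already packaged in Proposition~\ref{lemma:int-bound}, and the only step worth being careful about is ensuring that the exponent $\beta=1/(1-\upsilon)$ is strictly greater than $1$, which is automatic because $\upsilon>0$.
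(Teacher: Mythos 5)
Your proof is correct and follows essentially the same route as the paper: bound the $n$th moment of $\langle\varphi\rangle_{\ast}$ via the stretched exponential integrability from Proposition~\ref{lemma:int-bound} and conclude with Stirling, obtaining the exponent $\beta=1/(1-\upsilon)>1$. The only cosmetic difference is that the paper uses the pointwise inequality $x^{k}\leqslant k!\,\beta^{-k}e^{\beta x}$ directly inside the expectation, whereas you obtain the same moment bound by expanding $e^{\beta_0\langle\varphi\rangle_\ast^{1-\upsilon}}$ term by term; your proposal is also a bit more explicit than the paper about the two reduction steps (the duality pairing via Lemma~\ref{lem:dual2}, and the continuous embedding of $\mathcal{S}(\mathbb{R}^3)$ into $H^{1/2+2\kappa}(\rho^{-2})$ giving a Schwartz norm for OS0), which the paper leaves implicit.
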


\begin{proof}
  For any $\alpha \in (0, 1)$ and any $n \in\mathbb{N}$ we obtain with  the notation $\langle
\varphi \rangle_{\ast} \assign (1 + \| 
\varphi \|_{H^{- 1 / 2 - 2 \kappa}(\rho^{2})}^2)^{1 / 2}$
  \[ \mathbb{E}_{\nu} [\| \varphi \|_{H^{- 1 / 2 -2 \kappa} (\rho^2)}^n] \leqslant
     \mathbb{E}_{\nu} [\langle \varphi \rangle^{\alpha (n / \alpha)}] \leqslant
     \mathbb{E}_{\nu} [\langle \varphi \rangle^{\alpha \lceil n / \alpha \rceil}]
     \leqslant \beta^{- \lceil n / \alpha \rceil} (\lceil n / \alpha \rceil !)
     \mathbb{E}_{\nu} [e^{\beta \langle \varphi \rangle^{\alpha}}] \]
  \[ \leqslant K^n (n!)^{1 / \alpha} \mathbb{E}_{\nu} [e^{\beta \langle \varphi
     \rangle^{\alpha}}] ,\]
  where we used the fact that Stirling's asymptotic approximation of the
  factorial allows to estimate
  \[ \lceil n / \alpha \rceil ! \leqslant C \left( \frac{\lceil n / \alpha
     \rceil}{e} \right)^{\lceil n / \alpha \rceil} (2 \pi \lceil n / \alpha
     \rceil)^{1 / 2} \leqslant C \left( \frac{2 (n / \alpha)}{e} \right)^{n /
     \alpha + 1} (2 \pi \lceil n / \alpha \rceil)^{1 / 2} \]
  \[ \leqslant K^n \left[ \left( \frac{n}{e} \right)^n (2 \pi n)^{1 / 2}
     \right]^{1 / \alpha} \leqslant K^n (n!)^{1 / \alpha} \]
  for some constants $C, K$, uniformly in $n$ (we allow $K$ to change from
  line to line). From this we can conclude using Proposition \ref{lemma:int-bound}.
\end{proof}

\subsection{Translation invariance}
\label{ss:OS1}

For $h \in \mathbb{R}^3$ we denote by $\mathcal{T}_h : \mathcal{S}'
(\mathbb{R}^3) \rightarrow \mathcal{S}' (\mathbb{R}^3)$ the translation
operator, namely, $\mathcal{T}_h f (x) \assign f (x - h)$. Analogically, for a
measure $\mu$ on $\mathcal{S}' (\mathbb{R}^3)$ we define its translation by
$\mathcal{T}_h \mu (F) \assign \mu (F \circ \mathcal{T}_h)$ where $F \in C_b
(\mathcal{S}' (\mathbb{R}^3))$. We say that $\mu$ is translation invariant if
for all $h \in \mathbb{R}^3$ it holds $\mathcal{T}_h \mu = \mu$.

\begin{proposition}
 Any limit measure $\nu$ constructed via the procedure in Section~\ref{sec:tight} is translation
  invariant.
\end{proposition}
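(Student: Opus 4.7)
The plan is to transfer discrete translation invariance along the lattice $\varepsilon\mathbb Z^3$ to continuous translation invariance in the limit. First, observe that the approximating Gibbs measures $\nu_{M,\varepsilon}$ are manifestly invariant under discrete translations $\tau_y^{\mathrm{disc}}$ for $y\in\varepsilon\mathbb Z^3/M\mathbb Z^3$, since the Gibbs density in \eqref{eq:gibbs} and the discrete gradient $\nabla_\varepsilon$ are invariant under such shifts. A standard property of the extension operator $\mathcal{E}^{\varepsilon}$ from Section~\ref{s:ext} is that it intertwines discrete and continuous translations along lattice vectors, namely $\mathcal{E}^{\varepsilon}\circ\tau_y^{\mathrm{disc}}=\mathcal T_y\circ\mathcal{E}^{\varepsilon}$ for every $y\in\varepsilon\mathbb Z^3$. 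Combining these two facts, the extended measure $\tilde\nu_{M,\varepsilon}\assign(\mathcal{E}^{\varepsilon})_*\nu_{M,\varepsilon}$ satisfies $\mathcal T_y\tilde\nu_{M,\varepsilon}=\tilde\nu_{M,\varepsilon}$ for every $y\in\varepsilon\mathbb Z^3$.

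Fix now $h\in\mathbb R^3$ and let $(\varepsilon_n,M_n)$ be a subsequence along which $\tilde\nu_{M_n,\varepsilon_n}\to\nu$ weakly in the sense of Theorem~\ref{thm:main}. Choose $h_n\in\varepsilon_n\mathbb Z^3$ with $h_n\to h$ (e.g.\ componentwise rounding). Since $\mathcal T_{h_n}\tilde\nu_{M_n,\varepsilon_n}=\tilde\nu_{M_n,\varepsilon_n}$, it suffices to show that $\mathcal T_{h_n}\tilde\nu_{M_n,\varepsilon_n}\to\mathcal T_h\nu$ weakly; then $\nu=\mathcal T_h\nu$ by uniqueness of limits, proving invariance. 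I will carry this out at the level of characteristic functionals: for $f\in\mathcal S(\mathbb R^3)$,
\[
\chi_{\mathcal T_{h_n}\tilde\nu_{M_n,\varepsilon_n}}(f)=\chi_{\tilde\nu_{M_n,\varepsilon_n}}(\mathcal T_{-h_n}f),
\qquad
\chi_{\mathcal T_h\nu}(f)=\chi_\nu(\mathcal T_{-h}f).
\]

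The convergence of the right-hand sides will be split as
\[
\chi_{\tilde\nu_{M_n,\varepsilon_n}}(\mathcal T_{-h_n}f)-\chi_\nu(\mathcal T_{-h}f)
=\bigl[\chi_{\tilde\nu_{M_n,\varepsilon_n}}(\mathcal T_{-h_n}f)-\chi_{\tilde\nu_{M_n,\varepsilon_n}}(\mathcal T_{-h}f)\bigr]+\bigl[\chi_{\tilde\nu_{M_n,\varepsilon_n}}(\mathcal T_{-h}f)-\chi_\nu(\mathcal T_{-h}f)\bigr].
\]
The second bracket tends to $0$ by weak convergence, since $\varphi\mapsto e^{i\varphi(\mathcal T_{-h}f)}$ is continuous and bounded on $H^{-1/2-3\kappa}(\rho^{2+\kappa})$. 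For the first bracket, the elementary estimate $|e^{ia}-e^{ib}|\leq|a-b|$ together with the duality pairing gives
\[
\bigl|\chi_{\tilde\nu_{M_n,\varepsilon_n}}(\mathcal T_{-h_n}f)-\chi_{\tilde\nu_{M_n,\varepsilon_n}}(\mathcal T_{-h}f)\bigr|
\leq\mathbb E_{\tilde\nu_{M_n,\varepsilon_n}}\|\varphi\|_{H^{-1/2-2\kappa}(\rho^2)}\,\|\mathcal T_{-h_n}f-\mathcal T_{-h}f\|_{H^{1/2+2\kappa}(\rho^{-2})}.
\]
The expectation is uniformly bounded by Theorem~\ref{thm:main} (together with Lemma~\ref{lem:ext} to translate the estimate to the extended measure), while $\mathcal T_{-h_n}f\to\mathcal T_{-h}f$ in $\mathcal S(\mathbb R^3)$ and hence in $H^{1/2+2\kappa}(\rho^{-2})$ since $\rho^{-2}$ grows only polynomially and $f\in\mathcal S$.

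The only potentially delicate point is the commutation $\mathcal{E}^{\varepsilon}\circ\tau_y^{\mathrm{disc}}=\mathcal T_y\circ\mathcal{E}^{\varepsilon}$ for $y\in\varepsilon\mathbb Z^3$, which however is a direct consequence of the definition of $\mathcal{E}^{\varepsilon}$ recalled in Section~\ref{s:ext} (a Fourier multiplier that commutes with translations). Once this is in place, the rest of the argument is a straightforward tightness/characteristic-functional continuity argument, and no further input from the PDE analysis is needed.
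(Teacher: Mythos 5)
Your proof is correct and follows essentially the same strategy as the paper: discrete translation invariance of $\nu_{M,\varepsilon}$, commutation of $\mathcal{E}^\varepsilon$ with lattice shifts, approximation of $h$ by $h_n\in\varepsilon_n\mathbb{Z}^3$, weak convergence for the fixed shift, and control of the $h_n\to h$ discrepancy by the uniform moment bound from Theorem~\ref{thm:main}. The only (cosmetic) difference is that you test the measures against characteristic functionals and move the translation onto the Schwartz test function $f$ (so that $\|\mathcal{T}_{-h_n}f-\mathcal{T}_{-h}f\|_{H^{1/2+2\kappa}(\rho^{-2})}\to 0$ is immediate), whereas the paper tests against bounded Lipschitz functions and shifts the field, paying a small regularity loss for H\"older continuity of the translation semigroup on negative Sobolev spaces; both close for the same reason.
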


\begin{proof}
  By their definition in {\eqref{eq:gibbs}}, the approximate
  measures $\nu_{M, \varepsilon}$ are translation invariant under lattice
  shifts. That is, for $h_{\varepsilon} \in \Lambda_{\varepsilon}$ it holds
  $\mathcal{T}_{h_{\varepsilon}} \nu_{M, \varepsilon} = \nu_{M, \varepsilon}$.
  In other words, the processes $\varphi_{M, \varepsilon}$ and
  $\mathcal{T}_{h_{\varepsilon}} \varphi_{M, \varepsilon}$ coincide in law. In
  addition, since the translation $\mathcal{T}_{h_{\varepsilon}}$ commutes
  with the extension operator $\mathcal{E}^{\varepsilon}$, it follows that
  $\mathcal{E}^{\varepsilon} \varphi_{M, \varepsilon}$ and
  $\mathcal{T}_{h_{\varepsilon}} \mathcal{E}^{\varepsilon} \varphi_{M,
  \varepsilon}$ coincide in law. Now we recall that the limiting measure $\nu$
  was obtained as a weak limit of the laws of $\mathcal{E}^{\varepsilon}
  \varphi_{M, \varepsilon}$ on $H^{- 1 / 2 - 2\kappa} (\rho^{2 + \gamma})$. If
  $h \in \mathbb{R}^d$ is given, there exists a sequence $h_{\varepsilon} \in
  \Lambda_{\varepsilon}$ such that $h_{\varepsilon} \rightarrow h$. Let
  $\kappa \in (0, 1)$ be small and arbitrary. Then we have for $F \in C^{0,
  1}_b (H^{- 1 / 2 - 3 \kappa} (\rho^{2 + \gamma}))$ that
  \[ \mathcal{T}_h \nu (F) = \nu (F \circ \mathcal{T}_h) = \lim_{\varepsilon
     \rightarrow 0, M \rightarrow \infty} \mathbb{P} \circ
     (\mathcal{E}^{\varepsilon} \varphi_{M, \varepsilon})^{- 1} (F \circ
     \mathcal{T}_h) = \lim_{\varepsilon \rightarrow 0, M \rightarrow \infty}
     \mathbb{E} [F (\mathcal{T}_h \mathcal{E}^{\varepsilon} \varphi_{M,
     \varepsilon})] \]
  \[ \  \]
  \[ = \lim_{\varepsilon \rightarrow 0, M \rightarrow \infty} \mathbb{E} [F
     (\mathcal{T}_{h_{\varepsilon}} \mathcal{E}^{\varepsilon} \varphi_{M,
     \varepsilon})] = \lim_{\varepsilon \rightarrow 0, M \rightarrow \infty}
     \mathbb{E} [F (\mathcal{E}^{\varepsilon} \varphi_{M, \varepsilon})] = \nu
     (F), \]
  where in the third inequality we used the regularity of $F$ and Theorem~\ref{thm:tight} as follows
  \[ \mathbb{E} [F (\mathcal{T}_h \mathcal{E}^{\varepsilon} \varphi_{M,
     \varepsilon}) - F (\mathcal{T}_{h_{\varepsilon}}
     \mathcal{E}^{\varepsilon} \varphi_{M, \varepsilon})] \leqslant \| F
     \|_{C^{0, 1}_b} \mathbb{E} \| \mathcal{T}_h \mathcal{E}^{\varepsilon}
     \varphi_{M, \varepsilon} - \mathcal{T}_{h_{\varepsilon}}
     \mathcal{E}^{\varepsilon} \varphi_{M, \varepsilon} \|_{H^{- 1 / 2 - 3
     \kappa} (\rho^{2 + \gamma})} \]
  \[ \lesssim (h - h_{\varepsilon})^{\kappa} \mathbb{E} \|
     \mathcal{E}^{\varepsilon} \varphi_{M, \varepsilon} \|_{H^{- 1 / 2 -
     2\kappa} (\rho^{2 + \gamma})} \lesssim (h - h_{\varepsilon})^{\kappa}
     \rightarrow 0 \quad \tmop{as} \quad \varepsilon \rightarrow 0. \]
  If $F \in C_b (H^{- 1 / 2 - 3 \kappa} (\rho^{2 + \gamma}))$, then by
  approximation and dominated convergence theorem we also get $\mathcal{T}_h
  \nu (F) = \nu (F)$, which completes the proof.
\end{proof}

\subsection{Reflection positivity}
\label{ss:OS2}

As the next step we
establish reflection positivity of
$\nu$ with respect to the reflection given by any of the hyperplanes $\{x_i=0\}\subset \mathbb{R}^3$ for $i \in \{1, 2,3\}$. 
Fix a small $\delta>0$  and $i\in\{1,2,3\}$ and  define
the space of functionals depending on fields restricted to
$\mathbb{R}^3_{+, \delta}  := \{x \in \mathbb{R}^3 ; x_i > \delta\}$, $\delta\geqslant0$, by
\[ \mathcal{H}_{+, \delta} \assign \left\{ \sum_{k = 1}^K c_k e^{i \varphi
   (f_k)} ; c_k \in \mathbb{C}, f_k \in C^{\infty}_0 (\mathbb{R}^3_{+,\delta}), K \in
   \mathbb{N} \right\} \]
and let $\mathcal{H}_+ =\mathcal{H}_{+, 0}$. For a function $f : \mathbb{R}^3
\rightarrow \mathbb{R}$ we define its reflection
\[ (\theta f) (x) \assign (\theta^i f) (x) \assign f (x_1, \ldots, x_{i - 1},
   - x_i, x_{i + 1}, \ldots, x_3) \]
and extend it to $F \in \mathcal{H}_+$ by $\theta F (\varphi (f_1), \ldots,
\varphi (f_K)) \assign F (\varphi (\theta f_1), \ldots, \varphi (\theta
f_K))$. Hence for $F \in \mathcal{H}_{+, \delta}$ the reflection $\theta F$
depends on $\varphi$ evaluated at $x \in \mathbb{R}^3$ with $x_i < - \delta$.

A measure $\mu$ on $\mathcal{S}'(\mathbb{R}^3)$ is \emph{reflection positive} if
\[ \mathbb{E}_{\mu}  [\overline{\theta F} F] = \int_{\mathcal{S}'
   (\mathbb{R}^3)} \overline{\theta F (\varphi)} F (\varphi) \mu (\mathd
   \varphi) 
   \geqslant 0, \]
for all $F = \sum_{k = 1}^K c_k e^{i \varphi (f_k)} \in \mathcal{H}_+$.
A similar definition applies to measures on functions on the periodic lattice ${\Lambda_{M, \varepsilon}}$ replacing the space $\mathcal{H}_+$ with the appropriate modification $\mathcal{H}_+^{M, \varepsilon}$ given by
$$
\mathcal{H}^{M,\varepsilon}_{+} \assign \left\{ \sum_{k = 1}^K c_k e^{i \varphi
   (f_k)} ; c_k \in \mathbb{C}, f_k :\Lambda_{M,\varepsilon}\cap\mathbb{R}^{3}_{+,0}\to\mathbb{R}\right\}.
$$
The reflection $\theta$ is then defined as on the full space.
Here and also in the proof of Proposition~\ref{prop:RP} below we implicitly assume that $\varepsilon$ is small enough and $M$ is large enough.

An important fact is that for every  $\varepsilon, M$ the Gibbs  measures $\nu_{M, \varepsilon}$ are reflection
  positive see~{\cite[Theorem 7.10.3]{MR887102} or  \cite[Lemma 10.8]{friedli2017statistical}}. The key point of the next proposition is that this property is preserved along the passage to the limit $M\to\infty$, $\varepsilon\to0$.

\begin{proposition}\label{prop:RP}
  Any limit measure $\nu$ constructed via the procedure in
  Section~\ref{sec:tight} is reflection positive with respect to all
  reflections $\theta = \theta^i$, $i \in \{1, 2,3\}$. In particular, its correlation functions
  satisfy {\em OS2}.
\end{proposition}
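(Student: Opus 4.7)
The plan is to transfer reflection positivity from the lattice measures $\nu_{M,\varepsilon}$ to the limit $\nu$ via the extension operator $\mathcal{E}^{\varepsilon}$, using a small positive buffer $\delta > 0$ to accommodate the mismatch between the continuum and the lattice. Fix $i \in \{1,2,3\}$ and the corresponding reflection $\theta = \theta^i$.

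First I would reduce to the dense subspace $\mathcal{H}_{+,\delta} \subset \mathcal{H}_+$ for $\delta > 0$. Take $F = \sum_{k=1}^K c_k e^{i\varphi(f_k)}$ with $f_k \in C^\infty_0(\mathbb{R}^3_{+,\delta})$. For each $\varepsilon = 2^{-N}$ small enough (so that $\varepsilon < \delta/2$, say) and $M$ large enough (so that $\mathrm{supp}(f_k) \subset \mathbb{T}^3_M$), define the lattice restrictions $f_k^{M,\varepsilon} \assign (f_k)|_{\Lambda_{M,\varepsilon}}$. Since $\mathrm{supp}(f_k) \subset \mathbb{R}^3_{+,\delta}$ and $\varepsilon < \delta/2$, the function $f_k^{M,\varepsilon}$ is supported in $\Lambda_{M,\varepsilon} \cap \mathbb{R}^3_{+,0}$, and moreover it is mapped by the lattice reflection $\theta$ to something supported in the strictly negative half-lattice, matching the reflection of the continuum $f_k$. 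Define the corresponding discrete functionals
\[ F^{M,\varepsilon}(\varphi_{M,\varepsilon}) \assign \sum_{k=1}^K c_k \exp\bigl(i \langle \varphi_{M,\varepsilon}, f_k^{M,\varepsilon} \rangle_{M,\varepsilon}\bigr) \in \mathcal{H}_+^{M,\varepsilon}. \]
Reflection positivity of the lattice Gibbs measure (which follows from the nearest-neighbour structure of the gradient term in \eqref{eq:gibbs} and the positivity of the single-site weight) yields $\mathbb{E}_{\nu_{M,\varepsilon}}[\overline{\theta F^{M,\varepsilon}}\, F^{M,\varepsilon}] \geq 0$ for every $\varepsilon, M$.

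Next I would pass to the limit. Rewriting through the extension operator and its adjoint, $\langle \varphi_{M,\varepsilon}, f_k^{M,\varepsilon}\rangle_{M,\varepsilon} = \langle \mathcal{E}^\varepsilon \varphi_{M,\varepsilon}, g_k^{M,\varepsilon}\rangle$ for an appropriate smooth $g_k^{M,\varepsilon}$ on $\mathbb{R}^3$ which, for $\varepsilon$ small, is still supported in $\mathbb{R}^3_{+,0}$ (thanks to the buffer $\delta$) and converges to $f_k$ in $H^{1/2+2\kappa}(\rho^{-2})$. The map
\[ \varphi \mapsto \sum_{k,\ell} \overline{c_k} c_\ell \exp\bigl(-i \langle \varphi, \theta g_k^{M,\varepsilon}\rangle + i \langle \varphi, g_\ell^{M,\varepsilon}\rangle\bigr) \]
is bounded and continuous on $H^{-1/2-3\kappa}(\rho^{2+\gamma})$, and converges uniformly on bounded sets to the analogous expression with $f_k, f_\ell$ in place of $g_k^{M,\varepsilon}, g_\ell^{M,\varepsilon}$. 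Combining this uniform-on-compacts convergence with the uniform bound on $\mathbb{E}\|\mathcal{E}^\varepsilon\varphi_{M,\varepsilon}\|_{H^{-1/2-2\kappa}(\rho^2)}^p$ from Theorem~\ref{thm:tight} (which supplies the tightness needed to upgrade weak convergence to convergence of these bounded but not compactly supported test functionals), together with the weak convergence $\mathcal{E}^\varepsilon \nu_{M,\varepsilon} \to \nu$, yields
\[ \mathbb{E}_\nu[\overline{\theta F}\, F] = \lim_{\varepsilon\to0,\,M\to\infty} \mathbb{E}_{\nu_{M,\varepsilon}}[\overline{\theta F^{M,\varepsilon}}\,F^{M,\varepsilon}] \geq 0. \]

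Finally I would remove the buffer. Any $F = \sum_k c_k e^{i\varphi(f_k)} \in \mathcal{H}_+$ with $f_k \in C^\infty_0(\mathbb{R}^3_{+,0})$ can be approximated by its translates $F_\delta = \sum_k c_k e^{i\varphi(\tau_\delta f_k)}$, where $\tau_\delta f_k(x) = f_k(x - \delta e_i) \in C^\infty_0(\mathbb{R}^3_{+,\delta})$. Since $\tau_\delta f_k \to f_k$ in $H^{1/2+2\kappa}(\rho^{-2})$ as $\delta\to 0$, and the map $\varphi \mapsto \overline{\theta F_\delta(\varphi)} F_\delta(\varphi)$ converges to $\overline{\theta F(\varphi)} F(\varphi)$ pointwise and is uniformly bounded by $(\sum_k |c_k|)^2$, dominated convergence gives $\mathbb{E}_\nu[\overline{\theta F} F] = \lim_{\delta\to 0} \mathbb{E}_\nu[\overline{\theta F_\delta} F_\delta] \geq 0$. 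The main obstacle is ensuring that the discrete approximations $F^{M,\varepsilon}$ really produce the same limiting expression as $F$ despite the two simultaneous limits $\varepsilon\to 0$ and $M\to\infty$, which is why the buffer $\delta>0$ is essential: it provides the uniform room needed for the lattice-supports to lie strictly in the positive half-space for all small $\varepsilon$, while the regularity bound from Theorem~\ref{thm:tight} handles the continuity issues arising from the unboundedness of $\varphi \mapsto \varphi(f)$ on the support of $\nu$.
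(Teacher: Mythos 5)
Your overall skeleton — lattice reflection positivity, buffer $\delta>0$, pass to the continuum limit, then remove the buffer, then upgrade to OS2 — matches the paper's, but the middle step as stated does not work. You define the lattice cylinder functional from the restriction $f_k^{M,\varepsilon}=f_k|_{\Lambda_{M,\varepsilon}}$ and then assert the existence of a continuum test function $g_k^{M,\varepsilon}$ with $\langle\varphi_{M,\varepsilon},f_k^{M,\varepsilon}\rangle_{M,\varepsilon}=\langle\mathcal{E}^\varepsilon\varphi_{M,\varepsilon},g_k^{M,\varepsilon}\rangle$, i.e.\ $(\mathcal{E}^\varepsilon)^*g_k^{M,\varepsilon}=f_k^{M,\varepsilon}$. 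This is a deconvolution problem against the smoothing kernel $w^\varepsilon$, and there is in general no smooth, compactly supported solution $g_k^{M,\varepsilon}$ (let alone one supported in the positive half-space); the Fourier quotient $\hat f_k/\hat{w}^\varepsilon$ produces non-local, badly behaved objects. So the bridge between the lattice quantity you proved nonnegative and the continuum limit $\mathbb{E}_\nu[\overline{\theta F}F]$ is unjustified.

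The paper sidesteps this by going in the opposite direction: instead of restricting $f_k$ to the lattice, it simply takes $F\circ\mathcal{E}^\varepsilon$ as the lattice cylinder function, whose intrinsic test functions are $\mathcal{E}^{\varepsilon,*}f_k$. No preimage problem arises, and the limit follows immediately from the weak convergence $\nu_{M,\varepsilon}\circ(\mathcal{E}^\varepsilon)^{-1}\to\nu$. The one nontrivial ingredient you omit is the radial symmetry of the mollifier $w$: it guarantees $\theta\,\mathcal{E}^\varepsilon=\mathcal{E}^\varepsilon\,\theta$, hence $F(\theta\,\mathcal{E}^\varepsilon\varphi_{M,\varepsilon})=(F\circ\mathcal{E}^\varepsilon)(\theta\varphi_{M,\varepsilon})$, which is what lets the continuum reflection interact cleanly with the lattice one. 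Your route could be repaired by estimating the error $\mathcal{E}^{\varepsilon,*}f_k - f_k|_{\Lambda_{M,\varepsilon}}$ directly rather than demanding an exact $g_k^{M,\varepsilon}$, but this is extra work that the paper's formulation makes unnecessary. Finally, note that you stop at the reflection positivity of $\nu$ and never verify OS2 itself, which requires approximating polynomial test functionals $\sum_n\varphi^{\otimes n}(f_n)$ in $L^2(\nu)$ by cylinder functions from $\mathcal{H}_+$ — a step that relies on the stretched exponential integrability established earlier.
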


\begin{proof}
  We recall that the measure $\nu$ was obtained as a
  limit of suitable continuum extensions of the measures $\nu_{M,
  \varepsilon}$ given by {\eqref{eq:gibbs}}. 
 Therefore, up to a subsequence, we have
  \[ \mathbb{E}_{\nu}  [\overline{\theta F} F]  = \lim_{\varepsilon
     \rightarrow 0, M \rightarrow \infty} \mathbb{E} [\overline{F (\theta
     \mathcal{E}^{\varepsilon} \varphi_{M, \varepsilon})} F
     (\mathcal{E}^{\varepsilon} \varphi_{M, \varepsilon})] . \]
 Recall  that the function $w$ in the definition of the
  extension operator $\mathcal{E}^{\varepsilon}$ is radially
  symmetric. Hence, we have
$
(\theta
     \mathcal{E}^{\varepsilon} \varphi_{M, \varepsilon})(f)= 
     \varphi_{M, \varepsilon}(\mathcal{E}^{\varepsilon,*} \theta f) =\varphi_{M, \varepsilon}(\theta \mathcal{E}^{\varepsilon,*}  f)
$  
for any function $f\in C^\infty_0(\mathbb{R}^3)$  supported in $\{x\in\mathbb{R}^{3};|x_i| <  M/2 -\delta\}$. Here $\mathcal{E}^{\varepsilon,*}$ is the adjoint of the extension operator. For a fixed $F  \in \mathcal{H}_{+, \delta}$ we have therefore $F (\theta
     \mathcal{E}^{\varepsilon} \varphi_{M, \varepsilon})=(F \circ \mathcal{E}^{\varepsilon})(
      \theta\varphi_{M, \varepsilon})$  provided $\varepsilon$ is small enough and $M$ large enough depending  on $F$ and
  $\delta$. Hence,
    \[ \mathbb{E}_{\nu}  [\overline{\theta F} F] = \lim_{\varepsilon \rightarrow
     0, M \rightarrow \infty} \mathbb{E} [\overline{F
     (\mathcal{E}^{\varepsilon} \theta \varphi_{M, \varepsilon})} F
     (\mathcal{E}^{\varepsilon} \varphi_{M, \varepsilon})]. \]
     However, since the extension operator is defined as a convolution with a non-compactly supported function $w^{\varepsilon}$, it is generally not true that $F \circ \mathcal{E}^{\varepsilon}
  \in \mathcal{H}_{+}^{M, \varepsilon}$. Thus, in order to be able to use the reflection positivity of the measures $\nu_{M,\varepsilon}$, we need to introduce an additional cut-off: let  $H_{\delta}:\mathbb{R}^{3}\to[0,1]$ be smooth and supported on $\mathbb{R}^{3}_{+,0}$ such that $H_{\delta}=1$ on $\mathbb{R}^{3}_{+,\delta/2}$. We denote by $H_{\delta,\varepsilon}$ its restriction to $\Lambda_{\varepsilon}$ and write
  \[ 
 R_\varepsilon  :=  F(\mathcal{E}^{\varepsilon}\varphi_{M,\varepsilon})-  F(\mathcal{E}^{\varepsilon}(H_{
  \delta,\varepsilon}\varphi_{M,\varepsilon}))
  .\] 
  Our goal is to show that $R_\varepsilon$ vanishes a.s. as $\varepsilon\to 0$. In view of the fact that $F$ is cylindrical and then regularity of $\varphi_{M,\varepsilon}$, it is enough to show that
  \begin{equation}
  \label{eq:H-lim}
\lim_{\varepsilon\to 0} \| (1-H_{\delta,\varepsilon})\mathcal{E}^{\varepsilon,*}f\|_{H^{1/2+\kappa,\varepsilon}(\rho^{-2})}=0
\end{equation}
for any function $f\in C^{\infty}_{0}(\mathbb{R}^{3}_{+,\delta}).$ It holds
  \begin{equation}\label{eq:lll}
  [(1-H_{\delta,\varepsilon})\mathcal{E}^{\varepsilon,*}f](x)=(1-H_{\delta,\varepsilon})(x)\int_{y\in \mathbb{R}^{3}:y_{i}>\delta}w^{\varepsilon}(x-y)f(y)\mathrm{d} y,
  \end{equation}
  where $1-H_{\delta,\varepsilon}(x)\neq 0$ only when $x_{i}\leq \delta/2$. Since $w^{\varepsilon}(\cdot)=\varepsilon^{-d}w(\varepsilon^{-1}\cdot)$ with $w\in\mathcal{S}(\mathbb{R}^{3}),$ we have for an arbitrary $K>0$ and $m\in\mathbb{N}$
  $$| \nabla^{m}w^{\varepsilon} (x - y) | \lesssim \varepsilon^{- d -m} | \varepsilon^{-1}(x - y)  |^{- K}.$$
In addition,  we know that the relevant $|x-y|$ on the right hand side of \eqref{eq:lll} satisfy $|x_{i}-y_{i}|>\delta/2$. Hence, choosing $K,L$ sufficiently large will give us a decay as $\varepsilon\to 0$ for every fixed $\delta>0$. Indeed, we also have
$
|\nabla_{\varepsilon}^{m}(1-H_{\delta,\varepsilon})(x)|\lesssim \delta^{-1}
$
uniformly in $\varepsilon$.
Thus,   we may estimate 
$$
   \| (1-H_{\delta,\varepsilon})\mathcal{E}^{\varepsilon,*}f \|_{H^{1/2+\kappa}(\rho^{-2})}\leq c(\varepsilon,\delta)\|f\|_{L^{\infty}}
 , $$
where $c(\varepsilon,\delta)\to 0$ as $\varepsilon\to\infty$ for every fixed  $\delta>0$. This concludes the proof of~\eqref{eq:H-lim}.

 On the other hand,   $F (\mathcal{E}^{\varepsilon} ( H_{\delta,\varepsilon} \cdot))
  \in \mathcal{H}_{+}^{M, \varepsilon}$  and consequently
  \[ \mathbb{E}_{\nu}  [\overline{\theta F} F] = \lim_{\varepsilon \rightarrow
     0, M \rightarrow \infty} \mathbb{E} [\overline{F
     (\mathcal{E}^{\varepsilon} (H_{\delta,\varepsilon}\theta \varphi_{M, \varepsilon}))} F
     (\mathcal{E}^{\varepsilon}(H_{\delta,\varepsilon} \varphi_{M, \varepsilon}))] \]
  \[ = \lim_{\varepsilon \rightarrow 0, M \rightarrow \infty} \mathbb{E}_{\nu_{M, \varepsilon}}
     [\overline{\theta (F (\mathcal{E}^{\varepsilon}( H_{\delta,\varepsilon} \cdot)))} F (\mathcal{E}^{\varepsilon}( H_{\delta,\varepsilon} \cdot))] \geqslant 0, \]
  where we used the reflection positivity  of the  measure $\nu_{M, \varepsilon}$.
  Using the support properties of $\nu$ we can now approximate any $F \in
  \mathcal{H}_+$ by functions in $\mathcal{H}_{+, \delta}$ and therefore
  obtain the first claim. Let us now show that~{\eqref{eq:OS2}} holds.
  Thanks to the exponential integrability satisfied by $\nu$, any
  polynomial of the form $G = \sum_{n \in\mathbb{N}_{0}} \varphi^{\otimes n} (f_n)$
  for sequences $(f_n \in \mathcal{S}_{\mathbb{C}} (\mathbb{R}^{3 n}_{+}))_{n
  \in\mathbb{N}_{0}}$ with finitely many nonzero elements, belongs to $L^2 (\nu)$.
  In particular it can be approximated in $L^2 (\nu)$ by  a sequence
  $(F_n)_n$ of cylinder functions in $\mathcal{H}_+$. Therefore
  $\mathbb{E}_{\nu}  [\overline{\theta G} G] = \lim_{n \rightarrow \infty}
  \mathbb{E}_{\nu}  [\overline{\theta F_n} F_n] \geqslant 0$ and we conclude
  that
  \[ \sum_{n, m \in\mathbb{N}_{0}} S_{n + m}^{\nu} (\overline{\theta f_n} \otimes
     f_m) = \sum_{n, m \in\mathbb{N}_{0}} \mathbb{E}_{\nu} [\varphi^{\otimes n}
     (\overline{\theta f_n}) \varphi^{\otimes m} (f_m)] =\mathbb{E}_{\nu} 
     [\overline{\theta G} G] \geqslant 0. \]
\end{proof}

\subsection{Non-Gaussianity}
\label{ss:nonG}

\begin{theorem}
 If $\lambda> 0$ then any limit measure $\nu$ constructed via the procedure in Section~\ref{sec:tight} is non-Gaussian.
\end{theorem}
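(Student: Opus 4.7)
The plan is to exhibit a nonvanishing cumulant of order $\geqslant 3$ for $\nu$. Since the finite-volume Gibbs measures $\nu_{M,\varepsilon}$ in \eqref{eq:gibbs} are invariant under $\varphi\mapsto -\varphi$, so is $\nu$, and hence all odd cumulants of $\nu$ vanish. We therefore aim at showing that the connected $4$-point function $\kappa_4^\nu(f,f,f,f)$ is nonzero for a suitable Schwartz test function $f$.

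First I would upgrade Theorem~\ref{thm:main} by extracting, along the same subsequence producing $\nu$, a joint weak limit of the triple $(\mathcal{E}^\varepsilon\varphi_{M,\varepsilon},\mathcal{E}^\varepsilon X_{M,\varepsilon},\mathcal{E}^\varepsilon\tthreeone{X}_{M,\varepsilon})$. Denote its law $\mu$ and write $(\varphi,X,\tthreeone{X})$ for a canonical sample; then $X$ is a centred stationary Gaussian (the continuum solution of $\LL X=\xi$) and $\tthreeone{X}$ is the stationary solution of $\LL\tthreeone{X}=\llbracket X^3\rrbracket$ which lives in the third Wiener chaos of $X$. Setting $\zeta:=\varphi-X+\lambda\tthreeone{X}$, Theorem~\ref{thm:main} yields $\mathbb{E}\|\zeta\|_{H^{1-2\kappa}(\rho^2)}^2\lesssim \lambda^2+\lambda^7$ and analogous $L^p$ moment bounds controlled by powers of $\lambda$.

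Next I would expand by multilinearity of the joint cumulant:
\begin{equation*}
\kappa_4^\nu(f,f,f,f)=\sum_{(\alpha_j)\in\{X,-\lambda\tthreeone{X},\zeta\}^4}\kappa_4\big(\alpha_1(f),\alpha_2(f),\alpha_3(f),\alpha_4(f)\big).
\end{equation*}
The purely-$X$ term is killed by Gaussianity of $X$. The terms in which $\alpha_j\in\{X,-\lambda\tthreeone{X}\}$ only are cumulants of polynomials in the Gaussian $X$ and can be computed explicitly by Wick's theorem as finite sums of connected Feynman integrals of products of the covariance $C$ of $X$ and of the Green's function of $\Q$. In particular the pure $\tthreeone{X}^{\otimes 4}$ contribution
$\lambda^4\kappa_4(\tthreeone{X}(f),\tthreeone{X}(f),\tthreeone{X}(f),\tthreeone{X}(f))$
can be evaluated as an explicit (tree-level) positive integral for any nonnegative bump $f$, using that $\tthreeone{X}\in\CC^{1/2-\kappa}(\rho^\sigma)$ admits pointwise values so that $\tthreeone{X}(f)$ converges to $\tthreeone{X}(y_0)$ as $f$ concentrates near a point $y_0$.

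The terms containing at least one factor $\zeta(f)$ would be controlled by H\"older's and duality inequalities in the weighted Besov/Sobolev setting of Section~\ref{s:not}, combined with the moment bounds of Theorem~\ref{thm:main} and the stochastic estimates on $X$ and $\tthreeone{X}$ from Section~\ref{ssec:stoch}.

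The hard part will be to show that these $\zeta$-containing terms cannot exactly cancel the explicit positive contribution coming from the pure chaos terms, for every $\lambda>0$ simultaneously. My resolution would be to test against a family $f_\delta(x)=\delta^{-3}\eta((x-y_0)/\delta)$ concentrating around a point $y_0$: as $\delta\to 0$ the pure $\tthreeone{X}^{\otimes 4}$ contribution remains finite and strictly positive (because $\tthreeone{X}$ is continuous), while the $\zeta$-containing terms can be shown to decay in $\delta$ thanks to the regularity gain of $\zeta$ encoded by $H^{1-2\kappa}\hookrightarrow B^{0}_{4,\infty}$ (Theorem~\ref{thm:main}) relative to the $\CC^{-1/2-\kappa}$ regularity of the $X$ factors appearing alongside $\zeta$. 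Proving this dominance uniformly in $\lambda$ is the delicate technical point, and one may need to choose $\delta=\delta(\lambda)$ together with an explicit tracking of the powers of $\lambda$ (which are available in Theorem~\ref{thm:main}). Once this is achieved, $\kappa_4^\nu(f_\delta,f_\delta,f_\delta,f_\delta)\neq 0$ for $\delta$ small enough and $\nu$ is non-Gaussian for every $\lambda>0$.
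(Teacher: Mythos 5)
Your strategy correctly reduces non-Gaussianity to exhibiting a nonvanishing connected four-point function, uses the same decomposition $\varphi = X - \lambda\tthreeone{X} + \zeta$, and correctly observes that the pure $X^{\otimes 4}$ cumulant vanishes by Gaussianity. However, you misidentify the dominant contribution, and the scaling analysis behind your proposed resolution is backwards. You propose to rely on the pure $\tthreeone{X}^{\otimes 4}$ cumulant, noting that it converges to a finite positive quantity as $f_\delta$ concentrates (since $\tthreeone{X}\in\CC^{1/2-\kappa}$ is a genuine function), while claiming the $\zeta$-containing terms decay. Both halves of this claim fail. First, the cross-cumulant $\kappa_4\bigl(X(f_\delta),X(f_\delta),X(f_\delta),\tthreeone{X}(f_\delta)\bigr)$ — three factors of $X$ against one of $\tthreeone{X}$, entering the expansion with coefficient $-4\lambda$ — blows up like $\delta^{-1}$ (equivalently $2^i$ in the Littlewood--Paley normalization the paper uses): $X$ has negative H\"older regularity $-1/2-\kappa$, and the explicit Wick computation in the paper shows this term grows with a definite sign. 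This contribution swamps the bounded pure $\tthreeone{X}^{\otimes 4}$ term, so your proposed positive lower bound cannot carry the argument. Second, the $\zeta$-containing terms do \emph{not} decay: $\kappa_4(X,X,X,\zeta)$ still contains three negative-regularity $X$ factors and is bounded in the paper by $2^{i(1/2+5\kappa)}$, which grows; note also that $\zeta \in H^{1-2\kappa}(\rho^2)$ does not imply continuity in $d=3$, so $\zeta(f_\delta)$ itself does not stabilize. The actual hierarchy of growth rates, established in the paper, is $2^i$ (the $XXX\tthreeone{X}$ cross-term) $\gg 2^{i(1/2+5\kappa)}$ ($\zeta$-containing terms) $\gg 2^{i4\kappa}$ (terms at least bilinear in $\tthreeone{X}$), which is opposite to your ordering. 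The paper's proof exploits precisely the blow-up of the cross-term: it computes $L(X,X,X,\tthreeone{X})\approx 2^i$ as a manifestly positive Feynman integral and then shows $|R|\lesssim 2^{i(1/2+5\kappa)}(\lambda^{3/4}+\lambda^7)$, so that for any fixed $\lambda>0$ one can take $i$ large enough to conclude $L(\varphi,\varphi,\varphi,\varphi)\lesssim -2^i\lambda<0$. To repair your argument you would need to replace the pure $\tthreeone{X}^{\otimes 4}$ term by the cross-term $XXX\tthreeone{X}$ as the key nonvanishing contribution and then establish the correct growth hierarchy.
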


\begin{proof}
  In order to show that the limiting measure $\nu$ is non-Gaussian, it is
  sufficient to prove that the connected four-point function is nonzero, see \cite{MR723546}. In other words, we shall prove that the
  distribution
  \[ U^{\nu}_4 (x_1, \ldots, x_4) \assign \mathbb{E}_{\nu} [\varphi (x_1)
     \cdots \varphi (x_4)] \]
  \[ -\mathbb{E}_{\nu} [\varphi (x_1) \varphi (x_2)] \mathbb{E}_{\nu} [\varphi
     (x_3) \varphi (x_4)] -\mathbb{E}_{\nu} [\varphi (x_1) \varphi (x_3)]
     \mathbb{E}_{\nu} [\varphi (x_2) \varphi (x_4)] \]
  \[ -\mathbb{E}_{\nu} [\varphi (x_1) \varphi (x_4)] \mathbb{E}_{\nu} [\varphi
     (x_2) \varphi (x_3)], \qquad x_1, \ldots, x_4 \in \mathbb{R}^d, \]
  is nonzero.
  
  Recall that in Theorem~\ref{thm:main} we obtained a limit measure $\mu$ which is the joint law of $(\varphi,X,X^{\!\resizebox{0.6em}{!}{
\begin{tikzpicture}
\pgfpathmoveto{\pgfqpoint{0cm}{-0.035cm}}
\pgfpathlineto{\pgfqpoint{1.376cm}{-0.035cm}}
\pgfpathlineto{\pgfqpoint{1.376cm}{1.552cm}}
\pgfpathlineto{\pgfqpoint{0cm}{1.552cm}}
\pgfpathclose
\pgfusepath{clip}
\begin{pgfscope}
\begin{pgfscope}
\pgfpathmoveto{\pgfqpoint{0cm}{-0.035cm}}
\pgfpathlineto{\pgfqpoint{1.376cm}{-0.035cm}}
\pgfpathlineto{\pgfqpoint{1.376cm}{1.552cm}}
\pgfpathlineto{\pgfqpoint{0cm}{1.552cm}}
\pgfpathclose
\pgfusepath{clip}
\begin{pgfscope}
\begin{pgfscope}
\pgfsetdash{}{0cm}
\pgfsetlinewidth{0.818mm}
\pgfsetroundcap
\pgfsetroundjoin
\pgfsetmiterlimit{7.0}
\definecolor{eps2pgf_color}{gray}{0}\pgfsetstrokecolor{eps2pgf_color}\pgfsetfillcolor{eps2pgf_color}
\pgfpathmoveto{\pgfqpoint{0.117cm}{1.421cm}}
\pgfpathlineto{\pgfqpoint{0.682cm}{0.671cm}}
\pgfpathlineto{\pgfqpoint{1.246cm}{1.421cm}}
\pgfusepath{stroke}
\end{pgfscope}
\definecolor{eps2pgf_color}{gray}{0}\pgfsetstrokecolor{eps2pgf_color}\pgfsetfillcolor{eps2pgf_color}
\pgfpathmoveto{\pgfqpoint{0.273cm}{1.395cm}}
\pgfpathcurveto{\pgfqpoint{0.273cm}{1.432cm}}{\pgfqpoint{0.259cm}{1.467cm}}{\pgfqpoint{0.233cm}{1.492cm}}
\pgfpathcurveto{\pgfqpoint{0.207cm}{1.518cm}}{\pgfqpoint{0.173cm}{1.532cm}}{\pgfqpoint{0.137cm}{1.532cm}}
\pgfpathcurveto{\pgfqpoint{0.1cm}{1.532cm}}{\pgfqpoint{0.066cm}{1.518cm}}{\pgfqpoint{0.04cm}{1.492cm}}
\pgfpathcurveto{\pgfqpoint{0.014cm}{1.467cm}}{\pgfqpoint{0cm}{1.432cm}}{\pgfqpoint{0cm}{1.395cm}}
\pgfpathcurveto{\pgfqpoint{0cm}{1.359cm}}{\pgfqpoint{0.014cm}{1.324cm}}{\pgfqpoint{0.04cm}{1.299cm}}
\pgfpathcurveto{\pgfqpoint{0.066cm}{1.273cm}}{\pgfqpoint{0.1cm}{1.258cm}}{\pgfqpoint{0.137cm}{1.258cm}}
\pgfpathcurveto{\pgfqpoint{0.173cm}{1.258cm}}{\pgfqpoint{0.207cm}{1.273cm}}{\pgfqpoint{0.233cm}{1.299cm}}
\pgfpathcurveto{\pgfqpoint{0.259cm}{1.324cm}}{\pgfqpoint{0.273cm}{1.359cm}}{\pgfqpoint{0.273cm}{1.395cm}}
\pgfusepath{fill}
\begin{pgfscope}
\pgfsetdash{}{0cm}
\pgfsetlinewidth{0.818mm}
\pgfsetmiterlimit{7.0}
\pgfpathmoveto{\pgfqpoint{0.682cm}{0.671cm}}
\pgfpathlineto{\pgfqpoint{0.679cm}{1.418cm}}
\pgfusepath{stroke}
\end{pgfscope}
\pgfpathmoveto{\pgfqpoint{0.815cm}{1.399cm}}
\pgfpathcurveto{\pgfqpoint{0.815cm}{1.435cm}}{\pgfqpoint{0.801cm}{1.47cm}}{\pgfqpoint{0.775cm}{1.496cm}}
\pgfpathcurveto{\pgfqpoint{0.75cm}{1.521cm}}{\pgfqpoint{0.715cm}{1.536cm}}{\pgfqpoint{0.679cm}{1.536cm}}
\pgfpathcurveto{\pgfqpoint{0.643cm}{1.536cm}}{\pgfqpoint{0.608cm}{1.521cm}}{\pgfqpoint{0.582cm}{1.496cm}}
\pgfpathcurveto{\pgfqpoint{0.557cm}{1.47cm}}{\pgfqpoint{0.542cm}{1.435cm}}{\pgfqpoint{0.542cm}{1.399cm}}
\pgfpathcurveto{\pgfqpoint{0.542cm}{1.363cm}}{\pgfqpoint{0.557cm}{1.328cm}}{\pgfqpoint{0.582cm}{1.302cm}}
\pgfpathcurveto{\pgfqpoint{0.608cm}{1.276cm}}{\pgfqpoint{0.643cm}{1.262cm}}{\pgfqpoint{0.679cm}{1.262cm}}
\pgfpathcurveto{\pgfqpoint{0.715cm}{1.262cm}}{\pgfqpoint{0.75cm}{1.276cm}}{\pgfqpoint{0.775cm}{1.302cm}}
\pgfpathcurveto{\pgfqpoint{0.801cm}{1.328cm}}{\pgfqpoint{0.815cm}{1.363cm}}{\pgfqpoint{0.815cm}{1.399cm}}
\pgfusepath{fill}
\pgfpathmoveto{\pgfqpoint{1.345cm}{1.371cm}}
\pgfpathcurveto{\pgfqpoint{1.345cm}{1.408cm}}{\pgfqpoint{1.331cm}{1.442cm}}{\pgfqpoint{1.305cm}{1.468cm}}
\pgfpathcurveto{\pgfqpoint{1.28cm}{1.494cm}}{\pgfqpoint{1.245cm}{1.508cm}}{\pgfqpoint{1.209cm}{1.508cm}}
\pgfpathcurveto{\pgfqpoint{1.172cm}{1.508cm}}{\pgfqpoint{1.138cm}{1.494cm}}{\pgfqpoint{1.112cm}{1.468cm}}
\pgfpathcurveto{\pgfqpoint{1.087cm}{1.442cm}}{\pgfqpoint{1.072cm}{1.408cm}}{\pgfqpoint{1.072cm}{1.371cm}}
\pgfpathcurveto{\pgfqpoint{1.072cm}{1.335cm}}{\pgfqpoint{1.087cm}{1.3cm}}{\pgfqpoint{1.112cm}{1.274cm}}
\pgfpathcurveto{\pgfqpoint{1.138cm}{1.249cm}}{\pgfqpoint{1.172cm}{1.234cm}}{\pgfqpoint{1.209cm}{1.234cm}}
\pgfpathcurveto{\pgfqpoint{1.245cm}{1.234cm}}{\pgfqpoint{1.28cm}{1.249cm}}{\pgfqpoint{1.305cm}{1.274cm}}
\pgfpathcurveto{\pgfqpoint{1.331cm}{1.3cm}}{\pgfqpoint{1.345cm}{1.335cm}}{\pgfqpoint{1.345cm}{1.371cm}}
\pgfusepath{fill}
\begin{pgfscope}
\pgfsetdash{}{0cm}
\pgfsetlinewidth{0.818mm}
\pgfsetroundcap
\pgfsetmiterlimit{4.0}
\pgfpathmoveto{\pgfqpoint{0.682cm}{0.671cm}}
\pgfpathlineto{\pgfqpoint{0.682cm}{0.042cm}}
\pgfusepath{stroke}
\end{pgfscope}
\end{pgfscope}
\end{pgfscope}
\end{pgfscope}
\end{tikzpicture}}})$ and that $\nu$ is the marginal corresponding to the first component. Let $K_i = \mathcal{F}^{- 1} \varphi_i$ be a Littlewood--Paley
  projector and consider the connected four-point function $U^{\nu}_4$
  convolved with $(K_i, K_i, K_i, K_i)$ and evaluated at $(x_{1},\dots,x_{4})=(0,\dots, 0)$, that
  is,
  \[ U^{\nu}_4 \ast (K_i, K_i, K_i, K_i) (0, 0, 0, 0) =\mathbb{E}_{\nu}
     [(\Delta_i \varphi)^4 (0)] - 3\mathbb{E}_{\nu} [(\Delta_i \varphi)^2
     (0)]^2 \]
  \[ =\mathbb{E}_{\mu} [(\Delta_i \varphi)^4 (0)] - 3\mathbb{E}_{\mu}
     [(\Delta_i \varphi)^2 (0)]^2 \backassign L (\varphi, \varphi, \varphi,
     \varphi), \]
  where $L$ is a quadrilinear form. Since under the limit $\mu$ we have the
   decomposition $\varphi = X -\lambda X^{\!\resizebox{0.6em}{!}{
\begin{tikzpicture}
\pgfpathmoveto{\pgfqpoint{0cm}{-0.035cm}}
\pgfpathlineto{\pgfqpoint{1.376cm}{-0.035cm}}
\pgfpathlineto{\pgfqpoint{1.376cm}{1.552cm}}
\pgfpathlineto{\pgfqpoint{0cm}{1.552cm}}
\pgfpathclose
\pgfusepath{clip}
\begin{pgfscope}
\begin{pgfscope}
\pgfpathmoveto{\pgfqpoint{0cm}{-0.035cm}}
\pgfpathlineto{\pgfqpoint{1.376cm}{-0.035cm}}
\pgfpathlineto{\pgfqpoint{1.376cm}{1.552cm}}
\pgfpathlineto{\pgfqpoint{0cm}{1.552cm}}
\pgfpathclose
\pgfusepath{clip}
\begin{pgfscope}
\begin{pgfscope}
\pgfsetdash{}{0cm}
\pgfsetlinewidth{0.818mm}
\pgfsetroundcap
\pgfsetroundjoin
\pgfsetmiterlimit{7.0}
\definecolor{eps2pgf_color}{gray}{0}\pgfsetstrokecolor{eps2pgf_color}\pgfsetfillcolor{eps2pgf_color}
\pgfpathmoveto{\pgfqpoint{0.117cm}{1.421cm}}
\pgfpathlineto{\pgfqpoint{0.682cm}{0.671cm}}
\pgfpathlineto{\pgfqpoint{1.246cm}{1.421cm}}
\pgfusepath{stroke}
\end{pgfscope}
\definecolor{eps2pgf_color}{gray}{0}\pgfsetstrokecolor{eps2pgf_color}\pgfsetfillcolor{eps2pgf_color}
\pgfpathmoveto{\pgfqpoint{0.273cm}{1.395cm}}
\pgfpathcurveto{\pgfqpoint{0.273cm}{1.432cm}}{\pgfqpoint{0.259cm}{1.467cm}}{\pgfqpoint{0.233cm}{1.492cm}}
\pgfpathcurveto{\pgfqpoint{0.207cm}{1.518cm}}{\pgfqpoint{0.173cm}{1.532cm}}{\pgfqpoint{0.137cm}{1.532cm}}
\pgfpathcurveto{\pgfqpoint{0.1cm}{1.532cm}}{\pgfqpoint{0.066cm}{1.518cm}}{\pgfqpoint{0.04cm}{1.492cm}}
\pgfpathcurveto{\pgfqpoint{0.014cm}{1.467cm}}{\pgfqpoint{0cm}{1.432cm}}{\pgfqpoint{0cm}{1.395cm}}
\pgfpathcurveto{\pgfqpoint{0cm}{1.359cm}}{\pgfqpoint{0.014cm}{1.324cm}}{\pgfqpoint{0.04cm}{1.299cm}}
\pgfpathcurveto{\pgfqpoint{0.066cm}{1.273cm}}{\pgfqpoint{0.1cm}{1.258cm}}{\pgfqpoint{0.137cm}{1.258cm}}
\pgfpathcurveto{\pgfqpoint{0.173cm}{1.258cm}}{\pgfqpoint{0.207cm}{1.273cm}}{\pgfqpoint{0.233cm}{1.299cm}}
\pgfpathcurveto{\pgfqpoint{0.259cm}{1.324cm}}{\pgfqpoint{0.273cm}{1.359cm}}{\pgfqpoint{0.273cm}{1.395cm}}
\pgfusepath{fill}
\begin{pgfscope}
\pgfsetdash{}{0cm}
\pgfsetlinewidth{0.818mm}
\pgfsetmiterlimit{7.0}
\pgfpathmoveto{\pgfqpoint{0.682cm}{0.671cm}}
\pgfpathlineto{\pgfqpoint{0.679cm}{1.418cm}}
\pgfusepath{stroke}
\end{pgfscope}
\pgfpathmoveto{\pgfqpoint{0.815cm}{1.399cm}}
\pgfpathcurveto{\pgfqpoint{0.815cm}{1.435cm}}{\pgfqpoint{0.801cm}{1.47cm}}{\pgfqpoint{0.775cm}{1.496cm}}
\pgfpathcurveto{\pgfqpoint{0.75cm}{1.521cm}}{\pgfqpoint{0.715cm}{1.536cm}}{\pgfqpoint{0.679cm}{1.536cm}}
\pgfpathcurveto{\pgfqpoint{0.643cm}{1.536cm}}{\pgfqpoint{0.608cm}{1.521cm}}{\pgfqpoint{0.582cm}{1.496cm}}
\pgfpathcurveto{\pgfqpoint{0.557cm}{1.47cm}}{\pgfqpoint{0.542cm}{1.435cm}}{\pgfqpoint{0.542cm}{1.399cm}}
\pgfpathcurveto{\pgfqpoint{0.542cm}{1.363cm}}{\pgfqpoint{0.557cm}{1.328cm}}{\pgfqpoint{0.582cm}{1.302cm}}
\pgfpathcurveto{\pgfqpoint{0.608cm}{1.276cm}}{\pgfqpoint{0.643cm}{1.262cm}}{\pgfqpoint{0.679cm}{1.262cm}}
\pgfpathcurveto{\pgfqpoint{0.715cm}{1.262cm}}{\pgfqpoint{0.75cm}{1.276cm}}{\pgfqpoint{0.775cm}{1.302cm}}
\pgfpathcurveto{\pgfqpoint{0.801cm}{1.328cm}}{\pgfqpoint{0.815cm}{1.363cm}}{\pgfqpoint{0.815cm}{1.399cm}}
\pgfusepath{fill}
\pgfpathmoveto{\pgfqpoint{1.345cm}{1.371cm}}
\pgfpathcurveto{\pgfqpoint{1.345cm}{1.408cm}}{\pgfqpoint{1.331cm}{1.442cm}}{\pgfqpoint{1.305cm}{1.468cm}}
\pgfpathcurveto{\pgfqpoint{1.28cm}{1.494cm}}{\pgfqpoint{1.245cm}{1.508cm}}{\pgfqpoint{1.209cm}{1.508cm}}
\pgfpathcurveto{\pgfqpoint{1.172cm}{1.508cm}}{\pgfqpoint{1.138cm}{1.494cm}}{\pgfqpoint{1.112cm}{1.468cm}}
\pgfpathcurveto{\pgfqpoint{1.087cm}{1.442cm}}{\pgfqpoint{1.072cm}{1.408cm}}{\pgfqpoint{1.072cm}{1.371cm}}
\pgfpathcurveto{\pgfqpoint{1.072cm}{1.335cm}}{\pgfqpoint{1.087cm}{1.3cm}}{\pgfqpoint{1.112cm}{1.274cm}}
\pgfpathcurveto{\pgfqpoint{1.138cm}{1.249cm}}{\pgfqpoint{1.172cm}{1.234cm}}{\pgfqpoint{1.209cm}{1.234cm}}
\pgfpathcurveto{\pgfqpoint{1.245cm}{1.234cm}}{\pgfqpoint{1.28cm}{1.249cm}}{\pgfqpoint{1.305cm}{1.274cm}}
\pgfpathcurveto{\pgfqpoint{1.331cm}{1.3cm}}{\pgfqpoint{1.345cm}{1.335cm}}{\pgfqpoint{1.345cm}{1.371cm}}
\pgfusepath{fill}
\begin{pgfscope}
\pgfsetdash{}{0cm}
\pgfsetlinewidth{0.818mm}
\pgfsetroundcap
\pgfsetmiterlimit{4.0}
\pgfpathmoveto{\pgfqpoint{0.682cm}{0.671cm}}
\pgfpathlineto{\pgfqpoint{0.682cm}{0.042cm}}
\pgfusepath{stroke}
\end{pgfscope}
\end{pgfscope}
\end{pgfscope}
\end{pgfscope}
\end{tikzpicture}}} + \zeta$, we may write
  \begin{equation}
    L (\varphi, \varphi, \varphi, \varphi) = L (X, X, X, X) - 4 \lambda L ( X, X,
    X, X^{\!\resizebox{0.6em}{!}{
\begin{tikzpicture}
\pgfpathmoveto{\pgfqpoint{0cm}{-0.035cm}}
\pgfpathlineto{\pgfqpoint{1.376cm}{-0.035cm}}
\pgfpathlineto{\pgfqpoint{1.376cm}{1.552cm}}
\pgfpathlineto{\pgfqpoint{0cm}{1.552cm}}
\pgfpathclose
\pgfusepath{clip}
\begin{pgfscope}
\begin{pgfscope}
\pgfpathmoveto{\pgfqpoint{0cm}{-0.035cm}}
\pgfpathlineto{\pgfqpoint{1.376cm}{-0.035cm}}
\pgfpathlineto{\pgfqpoint{1.376cm}{1.552cm}}
\pgfpathlineto{\pgfqpoint{0cm}{1.552cm}}
\pgfpathclose
\pgfusepath{clip}
\begin{pgfscope}
\begin{pgfscope}
\pgfsetdash{}{0cm}
\pgfsetlinewidth{0.818mm}
\pgfsetroundcap
\pgfsetroundjoin
\pgfsetmiterlimit{7.0}
\definecolor{eps2pgf_color}{gray}{0}\pgfsetstrokecolor{eps2pgf_color}\pgfsetfillcolor{eps2pgf_color}
\pgfpathmoveto{\pgfqpoint{0.117cm}{1.421cm}}
\pgfpathlineto{\pgfqpoint{0.682cm}{0.671cm}}
\pgfpathlineto{\pgfqpoint{1.246cm}{1.421cm}}
\pgfusepath{stroke}
\end{pgfscope}
\definecolor{eps2pgf_color}{gray}{0}\pgfsetstrokecolor{eps2pgf_color}\pgfsetfillcolor{eps2pgf_color}
\pgfpathmoveto{\pgfqpoint{0.273cm}{1.395cm}}
\pgfpathcurveto{\pgfqpoint{0.273cm}{1.432cm}}{\pgfqpoint{0.259cm}{1.467cm}}{\pgfqpoint{0.233cm}{1.492cm}}
\pgfpathcurveto{\pgfqpoint{0.207cm}{1.518cm}}{\pgfqpoint{0.173cm}{1.532cm}}{\pgfqpoint{0.137cm}{1.532cm}}
\pgfpathcurveto{\pgfqpoint{0.1cm}{1.532cm}}{\pgfqpoint{0.066cm}{1.518cm}}{\pgfqpoint{0.04cm}{1.492cm}}
\pgfpathcurveto{\pgfqpoint{0.014cm}{1.467cm}}{\pgfqpoint{0cm}{1.432cm}}{\pgfqpoint{0cm}{1.395cm}}
\pgfpathcurveto{\pgfqpoint{0cm}{1.359cm}}{\pgfqpoint{0.014cm}{1.324cm}}{\pgfqpoint{0.04cm}{1.299cm}}
\pgfpathcurveto{\pgfqpoint{0.066cm}{1.273cm}}{\pgfqpoint{0.1cm}{1.258cm}}{\pgfqpoint{0.137cm}{1.258cm}}
\pgfpathcurveto{\pgfqpoint{0.173cm}{1.258cm}}{\pgfqpoint{0.207cm}{1.273cm}}{\pgfqpoint{0.233cm}{1.299cm}}
\pgfpathcurveto{\pgfqpoint{0.259cm}{1.324cm}}{\pgfqpoint{0.273cm}{1.359cm}}{\pgfqpoint{0.273cm}{1.395cm}}
\pgfusepath{fill}
\begin{pgfscope}
\pgfsetdash{}{0cm}
\pgfsetlinewidth{0.818mm}
\pgfsetmiterlimit{7.0}
\pgfpathmoveto{\pgfqpoint{0.682cm}{0.671cm}}
\pgfpathlineto{\pgfqpoint{0.679cm}{1.418cm}}
\pgfusepath{stroke}
\end{pgfscope}
\pgfpathmoveto{\pgfqpoint{0.815cm}{1.399cm}}
\pgfpathcurveto{\pgfqpoint{0.815cm}{1.435cm}}{\pgfqpoint{0.801cm}{1.47cm}}{\pgfqpoint{0.775cm}{1.496cm}}
\pgfpathcurveto{\pgfqpoint{0.75cm}{1.521cm}}{\pgfqpoint{0.715cm}{1.536cm}}{\pgfqpoint{0.679cm}{1.536cm}}
\pgfpathcurveto{\pgfqpoint{0.643cm}{1.536cm}}{\pgfqpoint{0.608cm}{1.521cm}}{\pgfqpoint{0.582cm}{1.496cm}}
\pgfpathcurveto{\pgfqpoint{0.557cm}{1.47cm}}{\pgfqpoint{0.542cm}{1.435cm}}{\pgfqpoint{0.542cm}{1.399cm}}
\pgfpathcurveto{\pgfqpoint{0.542cm}{1.363cm}}{\pgfqpoint{0.557cm}{1.328cm}}{\pgfqpoint{0.582cm}{1.302cm}}
\pgfpathcurveto{\pgfqpoint{0.608cm}{1.276cm}}{\pgfqpoint{0.643cm}{1.262cm}}{\pgfqpoint{0.679cm}{1.262cm}}
\pgfpathcurveto{\pgfqpoint{0.715cm}{1.262cm}}{\pgfqpoint{0.75cm}{1.276cm}}{\pgfqpoint{0.775cm}{1.302cm}}
\pgfpathcurveto{\pgfqpoint{0.801cm}{1.328cm}}{\pgfqpoint{0.815cm}{1.363cm}}{\pgfqpoint{0.815cm}{1.399cm}}
\pgfusepath{fill}
\pgfpathmoveto{\pgfqpoint{1.345cm}{1.371cm}}
\pgfpathcurveto{\pgfqpoint{1.345cm}{1.408cm}}{\pgfqpoint{1.331cm}{1.442cm}}{\pgfqpoint{1.305cm}{1.468cm}}
\pgfpathcurveto{\pgfqpoint{1.28cm}{1.494cm}}{\pgfqpoint{1.245cm}{1.508cm}}{\pgfqpoint{1.209cm}{1.508cm}}
\pgfpathcurveto{\pgfqpoint{1.172cm}{1.508cm}}{\pgfqpoint{1.138cm}{1.494cm}}{\pgfqpoint{1.112cm}{1.468cm}}
\pgfpathcurveto{\pgfqpoint{1.087cm}{1.442cm}}{\pgfqpoint{1.072cm}{1.408cm}}{\pgfqpoint{1.072cm}{1.371cm}}
\pgfpathcurveto{\pgfqpoint{1.072cm}{1.335cm}}{\pgfqpoint{1.087cm}{1.3cm}}{\pgfqpoint{1.112cm}{1.274cm}}
\pgfpathcurveto{\pgfqpoint{1.138cm}{1.249cm}}{\pgfqpoint{1.172cm}{1.234cm}}{\pgfqpoint{1.209cm}{1.234cm}}
\pgfpathcurveto{\pgfqpoint{1.245cm}{1.234cm}}{\pgfqpoint{1.28cm}{1.249cm}}{\pgfqpoint{1.305cm}{1.274cm}}
\pgfpathcurveto{\pgfqpoint{1.331cm}{1.3cm}}{\pgfqpoint{1.345cm}{1.335cm}}{\pgfqpoint{1.345cm}{1.371cm}}
\pgfusepath{fill}
\begin{pgfscope}
\pgfsetdash{}{0cm}
\pgfsetlinewidth{0.818mm}
\pgfsetroundcap
\pgfsetmiterlimit{4.0}
\pgfpathmoveto{\pgfqpoint{0.682cm}{0.671cm}}
\pgfpathlineto{\pgfqpoint{0.682cm}{0.042cm}}
\pgfusepath{stroke}
\end{pgfscope}
\end{pgfscope}
\end{pgfscope}
\end{pgfscope}
\end{tikzpicture}}} ) + R \label{eq:L24}
  \end{equation}
  where $R$ contains terms which are at least bilinear in $X^{\!\resizebox{0.6em}{!}{
\begin{tikzpicture}
\pgfpathmoveto{\pgfqpoint{0cm}{-0.035cm}}
\pgfpathlineto{\pgfqpoint{1.376cm}{-0.035cm}}
\pgfpathlineto{\pgfqpoint{1.376cm}{1.552cm}}
\pgfpathlineto{\pgfqpoint{0cm}{1.552cm}}
\pgfpathclose
\pgfusepath{clip}
\begin{pgfscope}
\begin{pgfscope}
\pgfpathmoveto{\pgfqpoint{0cm}{-0.035cm}}
\pgfpathlineto{\pgfqpoint{1.376cm}{-0.035cm}}
\pgfpathlineto{\pgfqpoint{1.376cm}{1.552cm}}
\pgfpathlineto{\pgfqpoint{0cm}{1.552cm}}
\pgfpathclose
\pgfusepath{clip}
\begin{pgfscope}
\begin{pgfscope}
\pgfsetdash{}{0cm}
\pgfsetlinewidth{0.818mm}
\pgfsetroundcap
\pgfsetroundjoin
\pgfsetmiterlimit{7.0}
\definecolor{eps2pgf_color}{gray}{0}\pgfsetstrokecolor{eps2pgf_color}\pgfsetfillcolor{eps2pgf_color}
\pgfpathmoveto{\pgfqpoint{0.117cm}{1.421cm}}
\pgfpathlineto{\pgfqpoint{0.682cm}{0.671cm}}
\pgfpathlineto{\pgfqpoint{1.246cm}{1.421cm}}
\pgfusepath{stroke}
\end{pgfscope}
\definecolor{eps2pgf_color}{gray}{0}\pgfsetstrokecolor{eps2pgf_color}\pgfsetfillcolor{eps2pgf_color}
\pgfpathmoveto{\pgfqpoint{0.273cm}{1.395cm}}
\pgfpathcurveto{\pgfqpoint{0.273cm}{1.432cm}}{\pgfqpoint{0.259cm}{1.467cm}}{\pgfqpoint{0.233cm}{1.492cm}}
\pgfpathcurveto{\pgfqpoint{0.207cm}{1.518cm}}{\pgfqpoint{0.173cm}{1.532cm}}{\pgfqpoint{0.137cm}{1.532cm}}
\pgfpathcurveto{\pgfqpoint{0.1cm}{1.532cm}}{\pgfqpoint{0.066cm}{1.518cm}}{\pgfqpoint{0.04cm}{1.492cm}}
\pgfpathcurveto{\pgfqpoint{0.014cm}{1.467cm}}{\pgfqpoint{0cm}{1.432cm}}{\pgfqpoint{0cm}{1.395cm}}
\pgfpathcurveto{\pgfqpoint{0cm}{1.359cm}}{\pgfqpoint{0.014cm}{1.324cm}}{\pgfqpoint{0.04cm}{1.299cm}}
\pgfpathcurveto{\pgfqpoint{0.066cm}{1.273cm}}{\pgfqpoint{0.1cm}{1.258cm}}{\pgfqpoint{0.137cm}{1.258cm}}
\pgfpathcurveto{\pgfqpoint{0.173cm}{1.258cm}}{\pgfqpoint{0.207cm}{1.273cm}}{\pgfqpoint{0.233cm}{1.299cm}}
\pgfpathcurveto{\pgfqpoint{0.259cm}{1.324cm}}{\pgfqpoint{0.273cm}{1.359cm}}{\pgfqpoint{0.273cm}{1.395cm}}
\pgfusepath{fill}
\begin{pgfscope}
\pgfsetdash{}{0cm}
\pgfsetlinewidth{0.818mm}
\pgfsetmiterlimit{7.0}
\pgfpathmoveto{\pgfqpoint{0.682cm}{0.671cm}}
\pgfpathlineto{\pgfqpoint{0.679cm}{1.418cm}}
\pgfusepath{stroke}
\end{pgfscope}
\pgfpathmoveto{\pgfqpoint{0.815cm}{1.399cm}}
\pgfpathcurveto{\pgfqpoint{0.815cm}{1.435cm}}{\pgfqpoint{0.801cm}{1.47cm}}{\pgfqpoint{0.775cm}{1.496cm}}
\pgfpathcurveto{\pgfqpoint{0.75cm}{1.521cm}}{\pgfqpoint{0.715cm}{1.536cm}}{\pgfqpoint{0.679cm}{1.536cm}}
\pgfpathcurveto{\pgfqpoint{0.643cm}{1.536cm}}{\pgfqpoint{0.608cm}{1.521cm}}{\pgfqpoint{0.582cm}{1.496cm}}
\pgfpathcurveto{\pgfqpoint{0.557cm}{1.47cm}}{\pgfqpoint{0.542cm}{1.435cm}}{\pgfqpoint{0.542cm}{1.399cm}}
\pgfpathcurveto{\pgfqpoint{0.542cm}{1.363cm}}{\pgfqpoint{0.557cm}{1.328cm}}{\pgfqpoint{0.582cm}{1.302cm}}
\pgfpathcurveto{\pgfqpoint{0.608cm}{1.276cm}}{\pgfqpoint{0.643cm}{1.262cm}}{\pgfqpoint{0.679cm}{1.262cm}}
\pgfpathcurveto{\pgfqpoint{0.715cm}{1.262cm}}{\pgfqpoint{0.75cm}{1.276cm}}{\pgfqpoint{0.775cm}{1.302cm}}
\pgfpathcurveto{\pgfqpoint{0.801cm}{1.328cm}}{\pgfqpoint{0.815cm}{1.363cm}}{\pgfqpoint{0.815cm}{1.399cm}}
\pgfusepath{fill}
\pgfpathmoveto{\pgfqpoint{1.345cm}{1.371cm}}
\pgfpathcurveto{\pgfqpoint{1.345cm}{1.408cm}}{\pgfqpoint{1.331cm}{1.442cm}}{\pgfqpoint{1.305cm}{1.468cm}}
\pgfpathcurveto{\pgfqpoint{1.28cm}{1.494cm}}{\pgfqpoint{1.245cm}{1.508cm}}{\pgfqpoint{1.209cm}{1.508cm}}
\pgfpathcurveto{\pgfqpoint{1.172cm}{1.508cm}}{\pgfqpoint{1.138cm}{1.494cm}}{\pgfqpoint{1.112cm}{1.468cm}}
\pgfpathcurveto{\pgfqpoint{1.087cm}{1.442cm}}{\pgfqpoint{1.072cm}{1.408cm}}{\pgfqpoint{1.072cm}{1.371cm}}
\pgfpathcurveto{\pgfqpoint{1.072cm}{1.335cm}}{\pgfqpoint{1.087cm}{1.3cm}}{\pgfqpoint{1.112cm}{1.274cm}}
\pgfpathcurveto{\pgfqpoint{1.138cm}{1.249cm}}{\pgfqpoint{1.172cm}{1.234cm}}{\pgfqpoint{1.209cm}{1.234cm}}
\pgfpathcurveto{\pgfqpoint{1.245cm}{1.234cm}}{\pgfqpoint{1.28cm}{1.249cm}}{\pgfqpoint{1.305cm}{1.274cm}}
\pgfpathcurveto{\pgfqpoint{1.331cm}{1.3cm}}{\pgfqpoint{1.345cm}{1.335cm}}{\pgfqpoint{1.345cm}{1.371cm}}
\pgfusepath{fill}
\begin{pgfscope}
\pgfsetdash{}{0cm}
\pgfsetlinewidth{0.818mm}
\pgfsetroundcap
\pgfsetmiterlimit{4.0}
\pgfpathmoveto{\pgfqpoint{0.682cm}{0.671cm}}
\pgfpathlineto{\pgfqpoint{0.682cm}{0.042cm}}
\pgfusepath{stroke}
\end{pgfscope}
\end{pgfscope}
\end{pgfscope}
\end{pgfscope}
\end{tikzpicture}}}$ or
  linear in $\zeta$. Due to Gaussianity of $X$, the first term on the right
  hand side of {\eqref{eq:L24}} vanishes. Our goal is to show that the second
  term behaves like $2^i$ whereas the terms in $R$ are more regular, namely,
  bounded by $2^{i (1 / 2 + \kappa)}$. In other words, $R$ cannot compensate
  $4\lambda L ( X, X, X, X^{\!\resizebox{0.6em}{!}{
\begin{tikzpicture}
\pgfpathmoveto{\pgfqpoint{0cm}{-0.035cm}}
\pgfpathlineto{\pgfqpoint{1.376cm}{-0.035cm}}
\pgfpathlineto{\pgfqpoint{1.376cm}{1.552cm}}
\pgfpathlineto{\pgfqpoint{0cm}{1.552cm}}
\pgfpathclose
\pgfusepath{clip}
\begin{pgfscope}
\begin{pgfscope}
\pgfpathmoveto{\pgfqpoint{0cm}{-0.035cm}}
\pgfpathlineto{\pgfqpoint{1.376cm}{-0.035cm}}
\pgfpathlineto{\pgfqpoint{1.376cm}{1.552cm}}
\pgfpathlineto{\pgfqpoint{0cm}{1.552cm}}
\pgfpathclose
\pgfusepath{clip}
\begin{pgfscope}
\begin{pgfscope}
\pgfsetdash{}{0cm}
\pgfsetlinewidth{0.818mm}
\pgfsetroundcap
\pgfsetroundjoin
\pgfsetmiterlimit{7.0}
\definecolor{eps2pgf_color}{gray}{0}\pgfsetstrokecolor{eps2pgf_color}\pgfsetfillcolor{eps2pgf_color}
\pgfpathmoveto{\pgfqpoint{0.117cm}{1.421cm}}
\pgfpathlineto{\pgfqpoint{0.682cm}{0.671cm}}
\pgfpathlineto{\pgfqpoint{1.246cm}{1.421cm}}
\pgfusepath{stroke}
\end{pgfscope}
\definecolor{eps2pgf_color}{gray}{0}\pgfsetstrokecolor{eps2pgf_color}\pgfsetfillcolor{eps2pgf_color}
\pgfpathmoveto{\pgfqpoint{0.273cm}{1.395cm}}
\pgfpathcurveto{\pgfqpoint{0.273cm}{1.432cm}}{\pgfqpoint{0.259cm}{1.467cm}}{\pgfqpoint{0.233cm}{1.492cm}}
\pgfpathcurveto{\pgfqpoint{0.207cm}{1.518cm}}{\pgfqpoint{0.173cm}{1.532cm}}{\pgfqpoint{0.137cm}{1.532cm}}
\pgfpathcurveto{\pgfqpoint{0.1cm}{1.532cm}}{\pgfqpoint{0.066cm}{1.518cm}}{\pgfqpoint{0.04cm}{1.492cm}}
\pgfpathcurveto{\pgfqpoint{0.014cm}{1.467cm}}{\pgfqpoint{0cm}{1.432cm}}{\pgfqpoint{0cm}{1.395cm}}
\pgfpathcurveto{\pgfqpoint{0cm}{1.359cm}}{\pgfqpoint{0.014cm}{1.324cm}}{\pgfqpoint{0.04cm}{1.299cm}}
\pgfpathcurveto{\pgfqpoint{0.066cm}{1.273cm}}{\pgfqpoint{0.1cm}{1.258cm}}{\pgfqpoint{0.137cm}{1.258cm}}
\pgfpathcurveto{\pgfqpoint{0.173cm}{1.258cm}}{\pgfqpoint{0.207cm}{1.273cm}}{\pgfqpoint{0.233cm}{1.299cm}}
\pgfpathcurveto{\pgfqpoint{0.259cm}{1.324cm}}{\pgfqpoint{0.273cm}{1.359cm}}{\pgfqpoint{0.273cm}{1.395cm}}
\pgfusepath{fill}
\begin{pgfscope}
\pgfsetdash{}{0cm}
\pgfsetlinewidth{0.818mm}
\pgfsetmiterlimit{7.0}
\pgfpathmoveto{\pgfqpoint{0.682cm}{0.671cm}}
\pgfpathlineto{\pgfqpoint{0.679cm}{1.418cm}}
\pgfusepath{stroke}
\end{pgfscope}
\pgfpathmoveto{\pgfqpoint{0.815cm}{1.399cm}}
\pgfpathcurveto{\pgfqpoint{0.815cm}{1.435cm}}{\pgfqpoint{0.801cm}{1.47cm}}{\pgfqpoint{0.775cm}{1.496cm}}
\pgfpathcurveto{\pgfqpoint{0.75cm}{1.521cm}}{\pgfqpoint{0.715cm}{1.536cm}}{\pgfqpoint{0.679cm}{1.536cm}}
\pgfpathcurveto{\pgfqpoint{0.643cm}{1.536cm}}{\pgfqpoint{0.608cm}{1.521cm}}{\pgfqpoint{0.582cm}{1.496cm}}
\pgfpathcurveto{\pgfqpoint{0.557cm}{1.47cm}}{\pgfqpoint{0.542cm}{1.435cm}}{\pgfqpoint{0.542cm}{1.399cm}}
\pgfpathcurveto{\pgfqpoint{0.542cm}{1.363cm}}{\pgfqpoint{0.557cm}{1.328cm}}{\pgfqpoint{0.582cm}{1.302cm}}
\pgfpathcurveto{\pgfqpoint{0.608cm}{1.276cm}}{\pgfqpoint{0.643cm}{1.262cm}}{\pgfqpoint{0.679cm}{1.262cm}}
\pgfpathcurveto{\pgfqpoint{0.715cm}{1.262cm}}{\pgfqpoint{0.75cm}{1.276cm}}{\pgfqpoint{0.775cm}{1.302cm}}
\pgfpathcurveto{\pgfqpoint{0.801cm}{1.328cm}}{\pgfqpoint{0.815cm}{1.363cm}}{\pgfqpoint{0.815cm}{1.399cm}}
\pgfusepath{fill}
\pgfpathmoveto{\pgfqpoint{1.345cm}{1.371cm}}
\pgfpathcurveto{\pgfqpoint{1.345cm}{1.408cm}}{\pgfqpoint{1.331cm}{1.442cm}}{\pgfqpoint{1.305cm}{1.468cm}}
\pgfpathcurveto{\pgfqpoint{1.28cm}{1.494cm}}{\pgfqpoint{1.245cm}{1.508cm}}{\pgfqpoint{1.209cm}{1.508cm}}
\pgfpathcurveto{\pgfqpoint{1.172cm}{1.508cm}}{\pgfqpoint{1.138cm}{1.494cm}}{\pgfqpoint{1.112cm}{1.468cm}}
\pgfpathcurveto{\pgfqpoint{1.087cm}{1.442cm}}{\pgfqpoint{1.072cm}{1.408cm}}{\pgfqpoint{1.072cm}{1.371cm}}
\pgfpathcurveto{\pgfqpoint{1.072cm}{1.335cm}}{\pgfqpoint{1.087cm}{1.3cm}}{\pgfqpoint{1.112cm}{1.274cm}}
\pgfpathcurveto{\pgfqpoint{1.138cm}{1.249cm}}{\pgfqpoint{1.172cm}{1.234cm}}{\pgfqpoint{1.209cm}{1.234cm}}
\pgfpathcurveto{\pgfqpoint{1.245cm}{1.234cm}}{\pgfqpoint{1.28cm}{1.249cm}}{\pgfqpoint{1.305cm}{1.274cm}}
\pgfpathcurveto{\pgfqpoint{1.331cm}{1.3cm}}{\pgfqpoint{1.345cm}{1.335cm}}{\pgfqpoint{1.345cm}{1.371cm}}
\pgfusepath{fill}
\begin{pgfscope}
\pgfsetdash{}{0cm}
\pgfsetlinewidth{0.818mm}
\pgfsetroundcap
\pgfsetmiterlimit{4.0}
\pgfpathmoveto{\pgfqpoint{0.682cm}{0.671cm}}
\pgfpathlineto{\pgfqpoint{0.682cm}{0.042cm}}
\pgfusepath{stroke}
\end{pgfscope}
\end{pgfscope}
\end{pgfscope}
\end{pgfscope}
\end{tikzpicture}}} )$ and as a consequence $L (\varphi,
  \varphi, \varphi, \varphi) \neq 0$ if $\lambda> 0$.
  
  Let us begin with $L ( X, X, X, X^{\!\resizebox{0.6em}{!}{
\begin{tikzpicture}
\pgfpathmoveto{\pgfqpoint{0cm}{-0.035cm}}
\pgfpathlineto{\pgfqpoint{1.376cm}{-0.035cm}}
\pgfpathlineto{\pgfqpoint{1.376cm}{1.552cm}}
\pgfpathlineto{\pgfqpoint{0cm}{1.552cm}}
\pgfpathclose
\pgfusepath{clip}
\begin{pgfscope}
\begin{pgfscope}
\pgfpathmoveto{\pgfqpoint{0cm}{-0.035cm}}
\pgfpathlineto{\pgfqpoint{1.376cm}{-0.035cm}}
\pgfpathlineto{\pgfqpoint{1.376cm}{1.552cm}}
\pgfpathlineto{\pgfqpoint{0cm}{1.552cm}}
\pgfpathclose
\pgfusepath{clip}
\begin{pgfscope}
\begin{pgfscope}
\pgfsetdash{}{0cm}
\pgfsetlinewidth{0.818mm}
\pgfsetroundcap
\pgfsetroundjoin
\pgfsetmiterlimit{7.0}
\definecolor{eps2pgf_color}{gray}{0}\pgfsetstrokecolor{eps2pgf_color}\pgfsetfillcolor{eps2pgf_color}
\pgfpathmoveto{\pgfqpoint{0.117cm}{1.421cm}}
\pgfpathlineto{\pgfqpoint{0.682cm}{0.671cm}}
\pgfpathlineto{\pgfqpoint{1.246cm}{1.421cm}}
\pgfusepath{stroke}
\end{pgfscope}
\definecolor{eps2pgf_color}{gray}{0}\pgfsetstrokecolor{eps2pgf_color}\pgfsetfillcolor{eps2pgf_color}
\pgfpathmoveto{\pgfqpoint{0.273cm}{1.395cm}}
\pgfpathcurveto{\pgfqpoint{0.273cm}{1.432cm}}{\pgfqpoint{0.259cm}{1.467cm}}{\pgfqpoint{0.233cm}{1.492cm}}
\pgfpathcurveto{\pgfqpoint{0.207cm}{1.518cm}}{\pgfqpoint{0.173cm}{1.532cm}}{\pgfqpoint{0.137cm}{1.532cm}}
\pgfpathcurveto{\pgfqpoint{0.1cm}{1.532cm}}{\pgfqpoint{0.066cm}{1.518cm}}{\pgfqpoint{0.04cm}{1.492cm}}
\pgfpathcurveto{\pgfqpoint{0.014cm}{1.467cm}}{\pgfqpoint{0cm}{1.432cm}}{\pgfqpoint{0cm}{1.395cm}}
\pgfpathcurveto{\pgfqpoint{0cm}{1.359cm}}{\pgfqpoint{0.014cm}{1.324cm}}{\pgfqpoint{0.04cm}{1.299cm}}
\pgfpathcurveto{\pgfqpoint{0.066cm}{1.273cm}}{\pgfqpoint{0.1cm}{1.258cm}}{\pgfqpoint{0.137cm}{1.258cm}}
\pgfpathcurveto{\pgfqpoint{0.173cm}{1.258cm}}{\pgfqpoint{0.207cm}{1.273cm}}{\pgfqpoint{0.233cm}{1.299cm}}
\pgfpathcurveto{\pgfqpoint{0.259cm}{1.324cm}}{\pgfqpoint{0.273cm}{1.359cm}}{\pgfqpoint{0.273cm}{1.395cm}}
\pgfusepath{fill}
\begin{pgfscope}
\pgfsetdash{}{0cm}
\pgfsetlinewidth{0.818mm}
\pgfsetmiterlimit{7.0}
\pgfpathmoveto{\pgfqpoint{0.682cm}{0.671cm}}
\pgfpathlineto{\pgfqpoint{0.679cm}{1.418cm}}
\pgfusepath{stroke}
\end{pgfscope}
\pgfpathmoveto{\pgfqpoint{0.815cm}{1.399cm}}
\pgfpathcurveto{\pgfqpoint{0.815cm}{1.435cm}}{\pgfqpoint{0.801cm}{1.47cm}}{\pgfqpoint{0.775cm}{1.496cm}}
\pgfpathcurveto{\pgfqpoint{0.75cm}{1.521cm}}{\pgfqpoint{0.715cm}{1.536cm}}{\pgfqpoint{0.679cm}{1.536cm}}
\pgfpathcurveto{\pgfqpoint{0.643cm}{1.536cm}}{\pgfqpoint{0.608cm}{1.521cm}}{\pgfqpoint{0.582cm}{1.496cm}}
\pgfpathcurveto{\pgfqpoint{0.557cm}{1.47cm}}{\pgfqpoint{0.542cm}{1.435cm}}{\pgfqpoint{0.542cm}{1.399cm}}
\pgfpathcurveto{\pgfqpoint{0.542cm}{1.363cm}}{\pgfqpoint{0.557cm}{1.328cm}}{\pgfqpoint{0.582cm}{1.302cm}}
\pgfpathcurveto{\pgfqpoint{0.608cm}{1.276cm}}{\pgfqpoint{0.643cm}{1.262cm}}{\pgfqpoint{0.679cm}{1.262cm}}
\pgfpathcurveto{\pgfqpoint{0.715cm}{1.262cm}}{\pgfqpoint{0.75cm}{1.276cm}}{\pgfqpoint{0.775cm}{1.302cm}}
\pgfpathcurveto{\pgfqpoint{0.801cm}{1.328cm}}{\pgfqpoint{0.815cm}{1.363cm}}{\pgfqpoint{0.815cm}{1.399cm}}
\pgfusepath{fill}
\pgfpathmoveto{\pgfqpoint{1.345cm}{1.371cm}}
\pgfpathcurveto{\pgfqpoint{1.345cm}{1.408cm}}{\pgfqpoint{1.331cm}{1.442cm}}{\pgfqpoint{1.305cm}{1.468cm}}
\pgfpathcurveto{\pgfqpoint{1.28cm}{1.494cm}}{\pgfqpoint{1.245cm}{1.508cm}}{\pgfqpoint{1.209cm}{1.508cm}}
\pgfpathcurveto{\pgfqpoint{1.172cm}{1.508cm}}{\pgfqpoint{1.138cm}{1.494cm}}{\pgfqpoint{1.112cm}{1.468cm}}
\pgfpathcurveto{\pgfqpoint{1.087cm}{1.442cm}}{\pgfqpoint{1.072cm}{1.408cm}}{\pgfqpoint{1.072cm}{1.371cm}}
\pgfpathcurveto{\pgfqpoint{1.072cm}{1.335cm}}{\pgfqpoint{1.087cm}{1.3cm}}{\pgfqpoint{1.112cm}{1.274cm}}
\pgfpathcurveto{\pgfqpoint{1.138cm}{1.249cm}}{\pgfqpoint{1.172cm}{1.234cm}}{\pgfqpoint{1.209cm}{1.234cm}}
\pgfpathcurveto{\pgfqpoint{1.245cm}{1.234cm}}{\pgfqpoint{1.28cm}{1.249cm}}{\pgfqpoint{1.305cm}{1.274cm}}
\pgfpathcurveto{\pgfqpoint{1.331cm}{1.3cm}}{\pgfqpoint{1.345cm}{1.335cm}}{\pgfqpoint{1.345cm}{1.371cm}}
\pgfusepath{fill}
\begin{pgfscope}
\pgfsetdash{}{0cm}
\pgfsetlinewidth{0.818mm}
\pgfsetroundcap
\pgfsetmiterlimit{4.0}
\pgfpathmoveto{\pgfqpoint{0.682cm}{0.671cm}}
\pgfpathlineto{\pgfqpoint{0.682cm}{0.042cm}}
\pgfusepath{stroke}
\end{pgfscope}
\end{pgfscope}
\end{pgfscope}
\end{pgfscope}
\end{tikzpicture}}} )$. To this end, we denote $k_{[123]}=k_{1}+k_{2}+k_{3}$ and
  recall that
  \[ (\Delta_i X) (0) = \int_{\mathbb{R}^d} \varphi_i (k) \int_{- \infty}^0
     e^{- [m^{2} + | k |^2] (- s)} \hat{\xi} (\mathd s, \mathd k), \]
  \[ ( \Delta_i X^{\!\resizebox{0.6em}{!}{
\begin{tikzpicture}
\pgfpathmoveto{\pgfqpoint{0cm}{-0.035cm}}
\pgfpathlineto{\pgfqpoint{1.376cm}{-0.035cm}}
\pgfpathlineto{\pgfqpoint{1.376cm}{1.552cm}}
\pgfpathlineto{\pgfqpoint{0cm}{1.552cm}}
\pgfpathclose
\pgfusepath{clip}
\begin{pgfscope}
\begin{pgfscope}
\pgfpathmoveto{\pgfqpoint{0cm}{-0.035cm}}
\pgfpathlineto{\pgfqpoint{1.376cm}{-0.035cm}}
\pgfpathlineto{\pgfqpoint{1.376cm}{1.552cm}}
\pgfpathlineto{\pgfqpoint{0cm}{1.552cm}}
\pgfpathclose
\pgfusepath{clip}
\begin{pgfscope}
\begin{pgfscope}
\pgfsetdash{}{0cm}
\pgfsetlinewidth{0.818mm}
\pgfsetroundcap
\pgfsetroundjoin
\pgfsetmiterlimit{7.0}
\definecolor{eps2pgf_color}{gray}{0}\pgfsetstrokecolor{eps2pgf_color}\pgfsetfillcolor{eps2pgf_color}
\pgfpathmoveto{\pgfqpoint{0.117cm}{1.421cm}}
\pgfpathlineto{\pgfqpoint{0.682cm}{0.671cm}}
\pgfpathlineto{\pgfqpoint{1.246cm}{1.421cm}}
\pgfusepath{stroke}
\end{pgfscope}
\definecolor{eps2pgf_color}{gray}{0}\pgfsetstrokecolor{eps2pgf_color}\pgfsetfillcolor{eps2pgf_color}
\pgfpathmoveto{\pgfqpoint{0.273cm}{1.395cm}}
\pgfpathcurveto{\pgfqpoint{0.273cm}{1.432cm}}{\pgfqpoint{0.259cm}{1.467cm}}{\pgfqpoint{0.233cm}{1.492cm}}
\pgfpathcurveto{\pgfqpoint{0.207cm}{1.518cm}}{\pgfqpoint{0.173cm}{1.532cm}}{\pgfqpoint{0.137cm}{1.532cm}}
\pgfpathcurveto{\pgfqpoint{0.1cm}{1.532cm}}{\pgfqpoint{0.066cm}{1.518cm}}{\pgfqpoint{0.04cm}{1.492cm}}
\pgfpathcurveto{\pgfqpoint{0.014cm}{1.467cm}}{\pgfqpoint{0cm}{1.432cm}}{\pgfqpoint{0cm}{1.395cm}}
\pgfpathcurveto{\pgfqpoint{0cm}{1.359cm}}{\pgfqpoint{0.014cm}{1.324cm}}{\pgfqpoint{0.04cm}{1.299cm}}
\pgfpathcurveto{\pgfqpoint{0.066cm}{1.273cm}}{\pgfqpoint{0.1cm}{1.258cm}}{\pgfqpoint{0.137cm}{1.258cm}}
\pgfpathcurveto{\pgfqpoint{0.173cm}{1.258cm}}{\pgfqpoint{0.207cm}{1.273cm}}{\pgfqpoint{0.233cm}{1.299cm}}
\pgfpathcurveto{\pgfqpoint{0.259cm}{1.324cm}}{\pgfqpoint{0.273cm}{1.359cm}}{\pgfqpoint{0.273cm}{1.395cm}}
\pgfusepath{fill}
\begin{pgfscope}
\pgfsetdash{}{0cm}
\pgfsetlinewidth{0.818mm}
\pgfsetmiterlimit{7.0}
\pgfpathmoveto{\pgfqpoint{0.682cm}{0.671cm}}
\pgfpathlineto{\pgfqpoint{0.679cm}{1.418cm}}
\pgfusepath{stroke}
\end{pgfscope}
\pgfpathmoveto{\pgfqpoint{0.815cm}{1.399cm}}
\pgfpathcurveto{\pgfqpoint{0.815cm}{1.435cm}}{\pgfqpoint{0.801cm}{1.47cm}}{\pgfqpoint{0.775cm}{1.496cm}}
\pgfpathcurveto{\pgfqpoint{0.75cm}{1.521cm}}{\pgfqpoint{0.715cm}{1.536cm}}{\pgfqpoint{0.679cm}{1.536cm}}
\pgfpathcurveto{\pgfqpoint{0.643cm}{1.536cm}}{\pgfqpoint{0.608cm}{1.521cm}}{\pgfqpoint{0.582cm}{1.496cm}}
\pgfpathcurveto{\pgfqpoint{0.557cm}{1.47cm}}{\pgfqpoint{0.542cm}{1.435cm}}{\pgfqpoint{0.542cm}{1.399cm}}
\pgfpathcurveto{\pgfqpoint{0.542cm}{1.363cm}}{\pgfqpoint{0.557cm}{1.328cm}}{\pgfqpoint{0.582cm}{1.302cm}}
\pgfpathcurveto{\pgfqpoint{0.608cm}{1.276cm}}{\pgfqpoint{0.643cm}{1.262cm}}{\pgfqpoint{0.679cm}{1.262cm}}
\pgfpathcurveto{\pgfqpoint{0.715cm}{1.262cm}}{\pgfqpoint{0.75cm}{1.276cm}}{\pgfqpoint{0.775cm}{1.302cm}}
\pgfpathcurveto{\pgfqpoint{0.801cm}{1.328cm}}{\pgfqpoint{0.815cm}{1.363cm}}{\pgfqpoint{0.815cm}{1.399cm}}
\pgfusepath{fill}
\pgfpathmoveto{\pgfqpoint{1.345cm}{1.371cm}}
\pgfpathcurveto{\pgfqpoint{1.345cm}{1.408cm}}{\pgfqpoint{1.331cm}{1.442cm}}{\pgfqpoint{1.305cm}{1.468cm}}
\pgfpathcurveto{\pgfqpoint{1.28cm}{1.494cm}}{\pgfqpoint{1.245cm}{1.508cm}}{\pgfqpoint{1.209cm}{1.508cm}}
\pgfpathcurveto{\pgfqpoint{1.172cm}{1.508cm}}{\pgfqpoint{1.138cm}{1.494cm}}{\pgfqpoint{1.112cm}{1.468cm}}
\pgfpathcurveto{\pgfqpoint{1.087cm}{1.442cm}}{\pgfqpoint{1.072cm}{1.408cm}}{\pgfqpoint{1.072cm}{1.371cm}}
\pgfpathcurveto{\pgfqpoint{1.072cm}{1.335cm}}{\pgfqpoint{1.087cm}{1.3cm}}{\pgfqpoint{1.112cm}{1.274cm}}
\pgfpathcurveto{\pgfqpoint{1.138cm}{1.249cm}}{\pgfqpoint{1.172cm}{1.234cm}}{\pgfqpoint{1.209cm}{1.234cm}}
\pgfpathcurveto{\pgfqpoint{1.245cm}{1.234cm}}{\pgfqpoint{1.28cm}{1.249cm}}{\pgfqpoint{1.305cm}{1.274cm}}
\pgfpathcurveto{\pgfqpoint{1.331cm}{1.3cm}}{\pgfqpoint{1.345cm}{1.335cm}}{\pgfqpoint{1.345cm}{1.371cm}}
\pgfusepath{fill}
\begin{pgfscope}
\pgfsetdash{}{0cm}
\pgfsetlinewidth{0.818mm}
\pgfsetroundcap
\pgfsetmiterlimit{4.0}
\pgfpathmoveto{\pgfqpoint{0.682cm}{0.671cm}}
\pgfpathlineto{\pgfqpoint{0.682cm}{0.042cm}}
\pgfusepath{stroke}
\end{pgfscope}
\end{pgfscope}
\end{pgfscope}
\end{pgfscope}
\end{tikzpicture}}} ) (0) = \int^0_{- \infty} \mathd s
     \int_{\mathbb{R}^d} \int_{\mathbb{R}^d} \int_{\mathbb{R}^d} \varphi_i
     (k_{[123]}) e^{- [m^{2} + | k_{[123]} |^2] (- s)} \]
  \[ \times \left\llbracket \prod_{l = 1, 2, 3} \int^s_{- \infty} e^{- [m^{2}+ |
     k_l |^2] (s - s_l)} \hat{\xi} (\mathd s_l, \mathd k_l) \right\rrbracket,
  \]
  where $\llbracket\cdot\rrbracket$ denotes Wick's product.
Hence denoting $H\assign 
     [4m^{2} + | k_{[123]} |^2+|k_{1}|^{2}+|k_{2}|^{2}+|k_{3}|^{2}] $ we obtain
  \[ L ( X, X, X, X^{\!\resizebox{0.6em}{!}{
\begin{tikzpicture}
\pgfpathmoveto{\pgfqpoint{0cm}{-0.035cm}}
\pgfpathlineto{\pgfqpoint{1.376cm}{-0.035cm}}
\pgfpathlineto{\pgfqpoint{1.376cm}{1.552cm}}
\pgfpathlineto{\pgfqpoint{0cm}{1.552cm}}
\pgfpathclose
\pgfusepath{clip}
\begin{pgfscope}
\begin{pgfscope}
\pgfpathmoveto{\pgfqpoint{0cm}{-0.035cm}}
\pgfpathlineto{\pgfqpoint{1.376cm}{-0.035cm}}
\pgfpathlineto{\pgfqpoint{1.376cm}{1.552cm}}
\pgfpathlineto{\pgfqpoint{0cm}{1.552cm}}
\pgfpathclose
\pgfusepath{clip}
\begin{pgfscope}
\begin{pgfscope}
\pgfsetdash{}{0cm}
\pgfsetlinewidth{0.818mm}
\pgfsetroundcap
\pgfsetroundjoin
\pgfsetmiterlimit{7.0}
\definecolor{eps2pgf_color}{gray}{0}\pgfsetstrokecolor{eps2pgf_color}\pgfsetfillcolor{eps2pgf_color}
\pgfpathmoveto{\pgfqpoint{0.117cm}{1.421cm}}
\pgfpathlineto{\pgfqpoint{0.682cm}{0.671cm}}
\pgfpathlineto{\pgfqpoint{1.246cm}{1.421cm}}
\pgfusepath{stroke}
\end{pgfscope}
\definecolor{eps2pgf_color}{gray}{0}\pgfsetstrokecolor{eps2pgf_color}\pgfsetfillcolor{eps2pgf_color}
\pgfpathmoveto{\pgfqpoint{0.273cm}{1.395cm}}
\pgfpathcurveto{\pgfqpoint{0.273cm}{1.432cm}}{\pgfqpoint{0.259cm}{1.467cm}}{\pgfqpoint{0.233cm}{1.492cm}}
\pgfpathcurveto{\pgfqpoint{0.207cm}{1.518cm}}{\pgfqpoint{0.173cm}{1.532cm}}{\pgfqpoint{0.137cm}{1.532cm}}
\pgfpathcurveto{\pgfqpoint{0.1cm}{1.532cm}}{\pgfqpoint{0.066cm}{1.518cm}}{\pgfqpoint{0.04cm}{1.492cm}}
\pgfpathcurveto{\pgfqpoint{0.014cm}{1.467cm}}{\pgfqpoint{0cm}{1.432cm}}{\pgfqpoint{0cm}{1.395cm}}
\pgfpathcurveto{\pgfqpoint{0cm}{1.359cm}}{\pgfqpoint{0.014cm}{1.324cm}}{\pgfqpoint{0.04cm}{1.299cm}}
\pgfpathcurveto{\pgfqpoint{0.066cm}{1.273cm}}{\pgfqpoint{0.1cm}{1.258cm}}{\pgfqpoint{0.137cm}{1.258cm}}
\pgfpathcurveto{\pgfqpoint{0.173cm}{1.258cm}}{\pgfqpoint{0.207cm}{1.273cm}}{\pgfqpoint{0.233cm}{1.299cm}}
\pgfpathcurveto{\pgfqpoint{0.259cm}{1.324cm}}{\pgfqpoint{0.273cm}{1.359cm}}{\pgfqpoint{0.273cm}{1.395cm}}
\pgfusepath{fill}
\begin{pgfscope}
\pgfsetdash{}{0cm}
\pgfsetlinewidth{0.818mm}
\pgfsetmiterlimit{7.0}
\pgfpathmoveto{\pgfqpoint{0.682cm}{0.671cm}}
\pgfpathlineto{\pgfqpoint{0.679cm}{1.418cm}}
\pgfusepath{stroke}
\end{pgfscope}
\pgfpathmoveto{\pgfqpoint{0.815cm}{1.399cm}}
\pgfpathcurveto{\pgfqpoint{0.815cm}{1.435cm}}{\pgfqpoint{0.801cm}{1.47cm}}{\pgfqpoint{0.775cm}{1.496cm}}
\pgfpathcurveto{\pgfqpoint{0.75cm}{1.521cm}}{\pgfqpoint{0.715cm}{1.536cm}}{\pgfqpoint{0.679cm}{1.536cm}}
\pgfpathcurveto{\pgfqpoint{0.643cm}{1.536cm}}{\pgfqpoint{0.608cm}{1.521cm}}{\pgfqpoint{0.582cm}{1.496cm}}
\pgfpathcurveto{\pgfqpoint{0.557cm}{1.47cm}}{\pgfqpoint{0.542cm}{1.435cm}}{\pgfqpoint{0.542cm}{1.399cm}}
\pgfpathcurveto{\pgfqpoint{0.542cm}{1.363cm}}{\pgfqpoint{0.557cm}{1.328cm}}{\pgfqpoint{0.582cm}{1.302cm}}
\pgfpathcurveto{\pgfqpoint{0.608cm}{1.276cm}}{\pgfqpoint{0.643cm}{1.262cm}}{\pgfqpoint{0.679cm}{1.262cm}}
\pgfpathcurveto{\pgfqpoint{0.715cm}{1.262cm}}{\pgfqpoint{0.75cm}{1.276cm}}{\pgfqpoint{0.775cm}{1.302cm}}
\pgfpathcurveto{\pgfqpoint{0.801cm}{1.328cm}}{\pgfqpoint{0.815cm}{1.363cm}}{\pgfqpoint{0.815cm}{1.399cm}}
\pgfusepath{fill}
\pgfpathmoveto{\pgfqpoint{1.345cm}{1.371cm}}
\pgfpathcurveto{\pgfqpoint{1.345cm}{1.408cm}}{\pgfqpoint{1.331cm}{1.442cm}}{\pgfqpoint{1.305cm}{1.468cm}}
\pgfpathcurveto{\pgfqpoint{1.28cm}{1.494cm}}{\pgfqpoint{1.245cm}{1.508cm}}{\pgfqpoint{1.209cm}{1.508cm}}
\pgfpathcurveto{\pgfqpoint{1.172cm}{1.508cm}}{\pgfqpoint{1.138cm}{1.494cm}}{\pgfqpoint{1.112cm}{1.468cm}}
\pgfpathcurveto{\pgfqpoint{1.087cm}{1.442cm}}{\pgfqpoint{1.072cm}{1.408cm}}{\pgfqpoint{1.072cm}{1.371cm}}
\pgfpathcurveto{\pgfqpoint{1.072cm}{1.335cm}}{\pgfqpoint{1.087cm}{1.3cm}}{\pgfqpoint{1.112cm}{1.274cm}}
\pgfpathcurveto{\pgfqpoint{1.138cm}{1.249cm}}{\pgfqpoint{1.172cm}{1.234cm}}{\pgfqpoint{1.209cm}{1.234cm}}
\pgfpathcurveto{\pgfqpoint{1.245cm}{1.234cm}}{\pgfqpoint{1.28cm}{1.249cm}}{\pgfqpoint{1.305cm}{1.274cm}}
\pgfpathcurveto{\pgfqpoint{1.331cm}{1.3cm}}{\pgfqpoint{1.345cm}{1.335cm}}{\pgfqpoint{1.345cm}{1.371cm}}
\pgfusepath{fill}
\begin{pgfscope}
\pgfsetdash{}{0cm}
\pgfsetlinewidth{0.818mm}
\pgfsetroundcap
\pgfsetmiterlimit{4.0}
\pgfpathmoveto{\pgfqpoint{0.682cm}{0.671cm}}
\pgfpathlineto{\pgfqpoint{0.682cm}{0.042cm}}
\pgfusepath{stroke}
\end{pgfscope}
\end{pgfscope}
\end{pgfscope}
\end{pgfscope}
\end{tikzpicture}}} ) =\mathbb{E} \left[ (\Delta_i X)
     (0) (\Delta_i X) (0) (\Delta_i X) (0) ( \Delta_i X^{\!\resizebox{0.6em}{!}{
\begin{tikzpicture}
\pgfpathmoveto{\pgfqpoint{0cm}{-0.035cm}}
\pgfpathlineto{\pgfqpoint{1.376cm}{-0.035cm}}
\pgfpathlineto{\pgfqpoint{1.376cm}{1.552cm}}
\pgfpathlineto{\pgfqpoint{0cm}{1.552cm}}
\pgfpathclose
\pgfusepath{clip}
\begin{pgfscope}
\begin{pgfscope}
\pgfpathmoveto{\pgfqpoint{0cm}{-0.035cm}}
\pgfpathlineto{\pgfqpoint{1.376cm}{-0.035cm}}
\pgfpathlineto{\pgfqpoint{1.376cm}{1.552cm}}
\pgfpathlineto{\pgfqpoint{0cm}{1.552cm}}
\pgfpathclose
\pgfusepath{clip}
\begin{pgfscope}
\begin{pgfscope}
\pgfsetdash{}{0cm}
\pgfsetlinewidth{0.818mm}
\pgfsetroundcap
\pgfsetroundjoin
\pgfsetmiterlimit{7.0}
\definecolor{eps2pgf_color}{gray}{0}\pgfsetstrokecolor{eps2pgf_color}\pgfsetfillcolor{eps2pgf_color}
\pgfpathmoveto{\pgfqpoint{0.117cm}{1.421cm}}
\pgfpathlineto{\pgfqpoint{0.682cm}{0.671cm}}
\pgfpathlineto{\pgfqpoint{1.246cm}{1.421cm}}
\pgfusepath{stroke}
\end{pgfscope}
\definecolor{eps2pgf_color}{gray}{0}\pgfsetstrokecolor{eps2pgf_color}\pgfsetfillcolor{eps2pgf_color}
\pgfpathmoveto{\pgfqpoint{0.273cm}{1.395cm}}
\pgfpathcurveto{\pgfqpoint{0.273cm}{1.432cm}}{\pgfqpoint{0.259cm}{1.467cm}}{\pgfqpoint{0.233cm}{1.492cm}}
\pgfpathcurveto{\pgfqpoint{0.207cm}{1.518cm}}{\pgfqpoint{0.173cm}{1.532cm}}{\pgfqpoint{0.137cm}{1.532cm}}
\pgfpathcurveto{\pgfqpoint{0.1cm}{1.532cm}}{\pgfqpoint{0.066cm}{1.518cm}}{\pgfqpoint{0.04cm}{1.492cm}}
\pgfpathcurveto{\pgfqpoint{0.014cm}{1.467cm}}{\pgfqpoint{0cm}{1.432cm}}{\pgfqpoint{0cm}{1.395cm}}
\pgfpathcurveto{\pgfqpoint{0cm}{1.359cm}}{\pgfqpoint{0.014cm}{1.324cm}}{\pgfqpoint{0.04cm}{1.299cm}}
\pgfpathcurveto{\pgfqpoint{0.066cm}{1.273cm}}{\pgfqpoint{0.1cm}{1.258cm}}{\pgfqpoint{0.137cm}{1.258cm}}
\pgfpathcurveto{\pgfqpoint{0.173cm}{1.258cm}}{\pgfqpoint{0.207cm}{1.273cm}}{\pgfqpoint{0.233cm}{1.299cm}}
\pgfpathcurveto{\pgfqpoint{0.259cm}{1.324cm}}{\pgfqpoint{0.273cm}{1.359cm}}{\pgfqpoint{0.273cm}{1.395cm}}
\pgfusepath{fill}
\begin{pgfscope}
\pgfsetdash{}{0cm}
\pgfsetlinewidth{0.818mm}
\pgfsetmiterlimit{7.0}
\pgfpathmoveto{\pgfqpoint{0.682cm}{0.671cm}}
\pgfpathlineto{\pgfqpoint{0.679cm}{1.418cm}}
\pgfusepath{stroke}
\end{pgfscope}
\pgfpathmoveto{\pgfqpoint{0.815cm}{1.399cm}}
\pgfpathcurveto{\pgfqpoint{0.815cm}{1.435cm}}{\pgfqpoint{0.801cm}{1.47cm}}{\pgfqpoint{0.775cm}{1.496cm}}
\pgfpathcurveto{\pgfqpoint{0.75cm}{1.521cm}}{\pgfqpoint{0.715cm}{1.536cm}}{\pgfqpoint{0.679cm}{1.536cm}}
\pgfpathcurveto{\pgfqpoint{0.643cm}{1.536cm}}{\pgfqpoint{0.608cm}{1.521cm}}{\pgfqpoint{0.582cm}{1.496cm}}
\pgfpathcurveto{\pgfqpoint{0.557cm}{1.47cm}}{\pgfqpoint{0.542cm}{1.435cm}}{\pgfqpoint{0.542cm}{1.399cm}}
\pgfpathcurveto{\pgfqpoint{0.542cm}{1.363cm}}{\pgfqpoint{0.557cm}{1.328cm}}{\pgfqpoint{0.582cm}{1.302cm}}
\pgfpathcurveto{\pgfqpoint{0.608cm}{1.276cm}}{\pgfqpoint{0.643cm}{1.262cm}}{\pgfqpoint{0.679cm}{1.262cm}}
\pgfpathcurveto{\pgfqpoint{0.715cm}{1.262cm}}{\pgfqpoint{0.75cm}{1.276cm}}{\pgfqpoint{0.775cm}{1.302cm}}
\pgfpathcurveto{\pgfqpoint{0.801cm}{1.328cm}}{\pgfqpoint{0.815cm}{1.363cm}}{\pgfqpoint{0.815cm}{1.399cm}}
\pgfusepath{fill}
\pgfpathmoveto{\pgfqpoint{1.345cm}{1.371cm}}
\pgfpathcurveto{\pgfqpoint{1.345cm}{1.408cm}}{\pgfqpoint{1.331cm}{1.442cm}}{\pgfqpoint{1.305cm}{1.468cm}}
\pgfpathcurveto{\pgfqpoint{1.28cm}{1.494cm}}{\pgfqpoint{1.245cm}{1.508cm}}{\pgfqpoint{1.209cm}{1.508cm}}
\pgfpathcurveto{\pgfqpoint{1.172cm}{1.508cm}}{\pgfqpoint{1.138cm}{1.494cm}}{\pgfqpoint{1.112cm}{1.468cm}}
\pgfpathcurveto{\pgfqpoint{1.087cm}{1.442cm}}{\pgfqpoint{1.072cm}{1.408cm}}{\pgfqpoint{1.072cm}{1.371cm}}
\pgfpathcurveto{\pgfqpoint{1.072cm}{1.335cm}}{\pgfqpoint{1.087cm}{1.3cm}}{\pgfqpoint{1.112cm}{1.274cm}}
\pgfpathcurveto{\pgfqpoint{1.138cm}{1.249cm}}{\pgfqpoint{1.172cm}{1.234cm}}{\pgfqpoint{1.209cm}{1.234cm}}
\pgfpathcurveto{\pgfqpoint{1.245cm}{1.234cm}}{\pgfqpoint{1.28cm}{1.249cm}}{\pgfqpoint{1.305cm}{1.274cm}}
\pgfpathcurveto{\pgfqpoint{1.331cm}{1.3cm}}{\pgfqpoint{1.345cm}{1.335cm}}{\pgfqpoint{1.345cm}{1.371cm}}
\pgfusepath{fill}
\begin{pgfscope}
\pgfsetdash{}{0cm}
\pgfsetlinewidth{0.818mm}
\pgfsetroundcap
\pgfsetmiterlimit{4.0}
\pgfpathmoveto{\pgfqpoint{0.682cm}{0.671cm}}
\pgfpathlineto{\pgfqpoint{0.682cm}{0.042cm}}
\pgfusepath{stroke}
\end{pgfscope}
\end{pgfscope}
\end{pgfscope}
\end{pgfscope}
\end{tikzpicture}}}
     ) (0) \right] \]
  \[ = 3! \int^0_{- \infty} \mathd s \int_{\mathbb{R}^d}
     \int_{\mathbb{R}^d} \int_{\mathbb{R}^d} \varphi_i (k_{[123]})  e^{-H(-s)}  \prod_{l = 1, 2, 3} \left[ \int_{- \infty}^s e^{- 
     2[m^{2}+| k_l |^2] (s - s_l)} \varphi_i (k_l) \mathd s_l \mathd k_l \right]
  \]
  \[ = \frac{3!}{8} \int^0_{- \infty} \mathd s
     \int_{\mathbb{R}^d} \int_{\mathbb{R}^d} \int_{\mathbb{R}^d} \varphi_i
     (k_{[123]}) e^{-H(-s)} \prod_{l = 1,
     2, 3} \left[ \varphi_i (k_l) \frac{\mathd k_l}{m^{2} + | k_l |^2} \right] \]
  \[ \  \]
  \[ = \frac{3!}{8} \int_{\mathbb{R}^d} \int_{\mathbb{R}^d}
     \int_{\mathbb{R}^d} \frac{\varphi_i (k_{[123]})}{H} \prod_{l = 1, 2, 3} \left[ \varphi_i (k_l) \frac{\mathd k_l}{m^{2} +
     | k_l |^2} \right] \approx 2^{i (- 8 + 9)} \approx 2^i . \]

  Let us now estimate various terms in $R$. The terms containing only
  combinations of $X, X^{\!\resizebox{0.6em}{!}{
\begin{tikzpicture}
\pgfpathmoveto{\pgfqpoint{0cm}{-0.035cm}}
\pgfpathlineto{\pgfqpoint{1.376cm}{-0.035cm}}
\pgfpathlineto{\pgfqpoint{1.376cm}{1.552cm}}
\pgfpathlineto{\pgfqpoint{0cm}{1.552cm}}
\pgfpathclose
\pgfusepath{clip}
\begin{pgfscope}
\begin{pgfscope}
\pgfpathmoveto{\pgfqpoint{0cm}{-0.035cm}}
\pgfpathlineto{\pgfqpoint{1.376cm}{-0.035cm}}
\pgfpathlineto{\pgfqpoint{1.376cm}{1.552cm}}
\pgfpathlineto{\pgfqpoint{0cm}{1.552cm}}
\pgfpathclose
\pgfusepath{clip}
\begin{pgfscope}
\begin{pgfscope}
\pgfsetdash{}{0cm}
\pgfsetlinewidth{0.818mm}
\pgfsetroundcap
\pgfsetroundjoin
\pgfsetmiterlimit{7.0}
\definecolor{eps2pgf_color}{gray}{0}\pgfsetstrokecolor{eps2pgf_color}\pgfsetfillcolor{eps2pgf_color}
\pgfpathmoveto{\pgfqpoint{0.117cm}{1.421cm}}
\pgfpathlineto{\pgfqpoint{0.682cm}{0.671cm}}
\pgfpathlineto{\pgfqpoint{1.246cm}{1.421cm}}
\pgfusepath{stroke}
\end{pgfscope}
\definecolor{eps2pgf_color}{gray}{0}\pgfsetstrokecolor{eps2pgf_color}\pgfsetfillcolor{eps2pgf_color}
\pgfpathmoveto{\pgfqpoint{0.273cm}{1.395cm}}
\pgfpathcurveto{\pgfqpoint{0.273cm}{1.432cm}}{\pgfqpoint{0.259cm}{1.467cm}}{\pgfqpoint{0.233cm}{1.492cm}}
\pgfpathcurveto{\pgfqpoint{0.207cm}{1.518cm}}{\pgfqpoint{0.173cm}{1.532cm}}{\pgfqpoint{0.137cm}{1.532cm}}
\pgfpathcurveto{\pgfqpoint{0.1cm}{1.532cm}}{\pgfqpoint{0.066cm}{1.518cm}}{\pgfqpoint{0.04cm}{1.492cm}}
\pgfpathcurveto{\pgfqpoint{0.014cm}{1.467cm}}{\pgfqpoint{0cm}{1.432cm}}{\pgfqpoint{0cm}{1.395cm}}
\pgfpathcurveto{\pgfqpoint{0cm}{1.359cm}}{\pgfqpoint{0.014cm}{1.324cm}}{\pgfqpoint{0.04cm}{1.299cm}}
\pgfpathcurveto{\pgfqpoint{0.066cm}{1.273cm}}{\pgfqpoint{0.1cm}{1.258cm}}{\pgfqpoint{0.137cm}{1.258cm}}
\pgfpathcurveto{\pgfqpoint{0.173cm}{1.258cm}}{\pgfqpoint{0.207cm}{1.273cm}}{\pgfqpoint{0.233cm}{1.299cm}}
\pgfpathcurveto{\pgfqpoint{0.259cm}{1.324cm}}{\pgfqpoint{0.273cm}{1.359cm}}{\pgfqpoint{0.273cm}{1.395cm}}
\pgfusepath{fill}
\begin{pgfscope}
\pgfsetdash{}{0cm}
\pgfsetlinewidth{0.818mm}
\pgfsetmiterlimit{7.0}
\pgfpathmoveto{\pgfqpoint{0.682cm}{0.671cm}}
\pgfpathlineto{\pgfqpoint{0.679cm}{1.418cm}}
\pgfusepath{stroke}
\end{pgfscope}
\pgfpathmoveto{\pgfqpoint{0.815cm}{1.399cm}}
\pgfpathcurveto{\pgfqpoint{0.815cm}{1.435cm}}{\pgfqpoint{0.801cm}{1.47cm}}{\pgfqpoint{0.775cm}{1.496cm}}
\pgfpathcurveto{\pgfqpoint{0.75cm}{1.521cm}}{\pgfqpoint{0.715cm}{1.536cm}}{\pgfqpoint{0.679cm}{1.536cm}}
\pgfpathcurveto{\pgfqpoint{0.643cm}{1.536cm}}{\pgfqpoint{0.608cm}{1.521cm}}{\pgfqpoint{0.582cm}{1.496cm}}
\pgfpathcurveto{\pgfqpoint{0.557cm}{1.47cm}}{\pgfqpoint{0.542cm}{1.435cm}}{\pgfqpoint{0.542cm}{1.399cm}}
\pgfpathcurveto{\pgfqpoint{0.542cm}{1.363cm}}{\pgfqpoint{0.557cm}{1.328cm}}{\pgfqpoint{0.582cm}{1.302cm}}
\pgfpathcurveto{\pgfqpoint{0.608cm}{1.276cm}}{\pgfqpoint{0.643cm}{1.262cm}}{\pgfqpoint{0.679cm}{1.262cm}}
\pgfpathcurveto{\pgfqpoint{0.715cm}{1.262cm}}{\pgfqpoint{0.75cm}{1.276cm}}{\pgfqpoint{0.775cm}{1.302cm}}
\pgfpathcurveto{\pgfqpoint{0.801cm}{1.328cm}}{\pgfqpoint{0.815cm}{1.363cm}}{\pgfqpoint{0.815cm}{1.399cm}}
\pgfusepath{fill}
\pgfpathmoveto{\pgfqpoint{1.345cm}{1.371cm}}
\pgfpathcurveto{\pgfqpoint{1.345cm}{1.408cm}}{\pgfqpoint{1.331cm}{1.442cm}}{\pgfqpoint{1.305cm}{1.468cm}}
\pgfpathcurveto{\pgfqpoint{1.28cm}{1.494cm}}{\pgfqpoint{1.245cm}{1.508cm}}{\pgfqpoint{1.209cm}{1.508cm}}
\pgfpathcurveto{\pgfqpoint{1.172cm}{1.508cm}}{\pgfqpoint{1.138cm}{1.494cm}}{\pgfqpoint{1.112cm}{1.468cm}}
\pgfpathcurveto{\pgfqpoint{1.087cm}{1.442cm}}{\pgfqpoint{1.072cm}{1.408cm}}{\pgfqpoint{1.072cm}{1.371cm}}
\pgfpathcurveto{\pgfqpoint{1.072cm}{1.335cm}}{\pgfqpoint{1.087cm}{1.3cm}}{\pgfqpoint{1.112cm}{1.274cm}}
\pgfpathcurveto{\pgfqpoint{1.138cm}{1.249cm}}{\pgfqpoint{1.172cm}{1.234cm}}{\pgfqpoint{1.209cm}{1.234cm}}
\pgfpathcurveto{\pgfqpoint{1.245cm}{1.234cm}}{\pgfqpoint{1.28cm}{1.249cm}}{\pgfqpoint{1.305cm}{1.274cm}}
\pgfpathcurveto{\pgfqpoint{1.331cm}{1.3cm}}{\pgfqpoint{1.345cm}{1.335cm}}{\pgfqpoint{1.345cm}{1.371cm}}
\pgfusepath{fill}
\begin{pgfscope}
\pgfsetdash{}{0cm}
\pgfsetlinewidth{0.818mm}
\pgfsetroundcap
\pgfsetmiterlimit{4.0}
\pgfpathmoveto{\pgfqpoint{0.682cm}{0.671cm}}
\pgfpathlineto{\pgfqpoint{0.682cm}{0.042cm}}
\pgfusepath{stroke}
\end{pgfscope}
\end{pgfscope}
\end{pgfscope}
\end{pgfscope}
\end{tikzpicture}}}$ can be estimated directly whereas for
  terms where $\zeta$ appears it is necessary to use stationarity due to the
  limited integrability in space. For instance,
  \[ \left| \mathbb{E} \left[ (\Delta_i X) (0) (\Delta_i X) (0) (
     \Delta_i X^{\!\resizebox{0.6em}{!}{
\begin{tikzpicture}
\pgfpathmoveto{\pgfqpoint{0cm}{-0.035cm}}
\pgfpathlineto{\pgfqpoint{1.376cm}{-0.035cm}}
\pgfpathlineto{\pgfqpoint{1.376cm}{1.552cm}}
\pgfpathlineto{\pgfqpoint{0cm}{1.552cm}}
\pgfpathclose
\pgfusepath{clip}
\begin{pgfscope}
\begin{pgfscope}
\pgfpathmoveto{\pgfqpoint{0cm}{-0.035cm}}
\pgfpathlineto{\pgfqpoint{1.376cm}{-0.035cm}}
\pgfpathlineto{\pgfqpoint{1.376cm}{1.552cm}}
\pgfpathlineto{\pgfqpoint{0cm}{1.552cm}}
\pgfpathclose
\pgfusepath{clip}
\begin{pgfscope}
\begin{pgfscope}
\pgfsetdash{}{0cm}
\pgfsetlinewidth{0.818mm}
\pgfsetroundcap
\pgfsetroundjoin
\pgfsetmiterlimit{7.0}
\definecolor{eps2pgf_color}{gray}{0}\pgfsetstrokecolor{eps2pgf_color}\pgfsetfillcolor{eps2pgf_color}
\pgfpathmoveto{\pgfqpoint{0.117cm}{1.421cm}}
\pgfpathlineto{\pgfqpoint{0.682cm}{0.671cm}}
\pgfpathlineto{\pgfqpoint{1.246cm}{1.421cm}}
\pgfusepath{stroke}
\end{pgfscope}
\definecolor{eps2pgf_color}{gray}{0}\pgfsetstrokecolor{eps2pgf_color}\pgfsetfillcolor{eps2pgf_color}
\pgfpathmoveto{\pgfqpoint{0.273cm}{1.395cm}}
\pgfpathcurveto{\pgfqpoint{0.273cm}{1.432cm}}{\pgfqpoint{0.259cm}{1.467cm}}{\pgfqpoint{0.233cm}{1.492cm}}
\pgfpathcurveto{\pgfqpoint{0.207cm}{1.518cm}}{\pgfqpoint{0.173cm}{1.532cm}}{\pgfqpoint{0.137cm}{1.532cm}}
\pgfpathcurveto{\pgfqpoint{0.1cm}{1.532cm}}{\pgfqpoint{0.066cm}{1.518cm}}{\pgfqpoint{0.04cm}{1.492cm}}
\pgfpathcurveto{\pgfqpoint{0.014cm}{1.467cm}}{\pgfqpoint{0cm}{1.432cm}}{\pgfqpoint{0cm}{1.395cm}}
\pgfpathcurveto{\pgfqpoint{0cm}{1.359cm}}{\pgfqpoint{0.014cm}{1.324cm}}{\pgfqpoint{0.04cm}{1.299cm}}
\pgfpathcurveto{\pgfqpoint{0.066cm}{1.273cm}}{\pgfqpoint{0.1cm}{1.258cm}}{\pgfqpoint{0.137cm}{1.258cm}}
\pgfpathcurveto{\pgfqpoint{0.173cm}{1.258cm}}{\pgfqpoint{0.207cm}{1.273cm}}{\pgfqpoint{0.233cm}{1.299cm}}
\pgfpathcurveto{\pgfqpoint{0.259cm}{1.324cm}}{\pgfqpoint{0.273cm}{1.359cm}}{\pgfqpoint{0.273cm}{1.395cm}}
\pgfusepath{fill}
\begin{pgfscope}
\pgfsetdash{}{0cm}
\pgfsetlinewidth{0.818mm}
\pgfsetmiterlimit{7.0}
\pgfpathmoveto{\pgfqpoint{0.682cm}{0.671cm}}
\pgfpathlineto{\pgfqpoint{0.679cm}{1.418cm}}
\pgfusepath{stroke}
\end{pgfscope}
\pgfpathmoveto{\pgfqpoint{0.815cm}{1.399cm}}
\pgfpathcurveto{\pgfqpoint{0.815cm}{1.435cm}}{\pgfqpoint{0.801cm}{1.47cm}}{\pgfqpoint{0.775cm}{1.496cm}}
\pgfpathcurveto{\pgfqpoint{0.75cm}{1.521cm}}{\pgfqpoint{0.715cm}{1.536cm}}{\pgfqpoint{0.679cm}{1.536cm}}
\pgfpathcurveto{\pgfqpoint{0.643cm}{1.536cm}}{\pgfqpoint{0.608cm}{1.521cm}}{\pgfqpoint{0.582cm}{1.496cm}}
\pgfpathcurveto{\pgfqpoint{0.557cm}{1.47cm}}{\pgfqpoint{0.542cm}{1.435cm}}{\pgfqpoint{0.542cm}{1.399cm}}
\pgfpathcurveto{\pgfqpoint{0.542cm}{1.363cm}}{\pgfqpoint{0.557cm}{1.328cm}}{\pgfqpoint{0.582cm}{1.302cm}}
\pgfpathcurveto{\pgfqpoint{0.608cm}{1.276cm}}{\pgfqpoint{0.643cm}{1.262cm}}{\pgfqpoint{0.679cm}{1.262cm}}
\pgfpathcurveto{\pgfqpoint{0.715cm}{1.262cm}}{\pgfqpoint{0.75cm}{1.276cm}}{\pgfqpoint{0.775cm}{1.302cm}}
\pgfpathcurveto{\pgfqpoint{0.801cm}{1.328cm}}{\pgfqpoint{0.815cm}{1.363cm}}{\pgfqpoint{0.815cm}{1.399cm}}
\pgfusepath{fill}
\pgfpathmoveto{\pgfqpoint{1.345cm}{1.371cm}}
\pgfpathcurveto{\pgfqpoint{1.345cm}{1.408cm}}{\pgfqpoint{1.331cm}{1.442cm}}{\pgfqpoint{1.305cm}{1.468cm}}
\pgfpathcurveto{\pgfqpoint{1.28cm}{1.494cm}}{\pgfqpoint{1.245cm}{1.508cm}}{\pgfqpoint{1.209cm}{1.508cm}}
\pgfpathcurveto{\pgfqpoint{1.172cm}{1.508cm}}{\pgfqpoint{1.138cm}{1.494cm}}{\pgfqpoint{1.112cm}{1.468cm}}
\pgfpathcurveto{\pgfqpoint{1.087cm}{1.442cm}}{\pgfqpoint{1.072cm}{1.408cm}}{\pgfqpoint{1.072cm}{1.371cm}}
\pgfpathcurveto{\pgfqpoint{1.072cm}{1.335cm}}{\pgfqpoint{1.087cm}{1.3cm}}{\pgfqpoint{1.112cm}{1.274cm}}
\pgfpathcurveto{\pgfqpoint{1.138cm}{1.249cm}}{\pgfqpoint{1.172cm}{1.234cm}}{\pgfqpoint{1.209cm}{1.234cm}}
\pgfpathcurveto{\pgfqpoint{1.245cm}{1.234cm}}{\pgfqpoint{1.28cm}{1.249cm}}{\pgfqpoint{1.305cm}{1.274cm}}
\pgfpathcurveto{\pgfqpoint{1.331cm}{1.3cm}}{\pgfqpoint{1.345cm}{1.335cm}}{\pgfqpoint{1.345cm}{1.371cm}}
\pgfusepath{fill}
\begin{pgfscope}
\pgfsetdash{}{0cm}
\pgfsetlinewidth{0.818mm}
\pgfsetroundcap
\pgfsetmiterlimit{4.0}
\pgfpathmoveto{\pgfqpoint{0.682cm}{0.671cm}}
\pgfpathlineto{\pgfqpoint{0.682cm}{0.042cm}}
\pgfusepath{stroke}
\end{pgfscope}
\end{pgfscope}
\end{pgfscope}
\end{pgfscope}
\end{tikzpicture}}} ) (0) ( \Delta_i X^{\!\resizebox{0.6em}{!}{
\begin{tikzpicture}
\pgfpathmoveto{\pgfqpoint{0cm}{-0.035cm}}
\pgfpathlineto{\pgfqpoint{1.376cm}{-0.035cm}}
\pgfpathlineto{\pgfqpoint{1.376cm}{1.552cm}}
\pgfpathlineto{\pgfqpoint{0cm}{1.552cm}}
\pgfpathclose
\pgfusepath{clip}
\begin{pgfscope}
\begin{pgfscope}
\pgfpathmoveto{\pgfqpoint{0cm}{-0.035cm}}
\pgfpathlineto{\pgfqpoint{1.376cm}{-0.035cm}}
\pgfpathlineto{\pgfqpoint{1.376cm}{1.552cm}}
\pgfpathlineto{\pgfqpoint{0cm}{1.552cm}}
\pgfpathclose
\pgfusepath{clip}
\begin{pgfscope}
\begin{pgfscope}
\pgfsetdash{}{0cm}
\pgfsetlinewidth{0.818mm}
\pgfsetroundcap
\pgfsetroundjoin
\pgfsetmiterlimit{7.0}
\definecolor{eps2pgf_color}{gray}{0}\pgfsetstrokecolor{eps2pgf_color}\pgfsetfillcolor{eps2pgf_color}
\pgfpathmoveto{\pgfqpoint{0.117cm}{1.421cm}}
\pgfpathlineto{\pgfqpoint{0.682cm}{0.671cm}}
\pgfpathlineto{\pgfqpoint{1.246cm}{1.421cm}}
\pgfusepath{stroke}
\end{pgfscope}
\definecolor{eps2pgf_color}{gray}{0}\pgfsetstrokecolor{eps2pgf_color}\pgfsetfillcolor{eps2pgf_color}
\pgfpathmoveto{\pgfqpoint{0.273cm}{1.395cm}}
\pgfpathcurveto{\pgfqpoint{0.273cm}{1.432cm}}{\pgfqpoint{0.259cm}{1.467cm}}{\pgfqpoint{0.233cm}{1.492cm}}
\pgfpathcurveto{\pgfqpoint{0.207cm}{1.518cm}}{\pgfqpoint{0.173cm}{1.532cm}}{\pgfqpoint{0.137cm}{1.532cm}}
\pgfpathcurveto{\pgfqpoint{0.1cm}{1.532cm}}{\pgfqpoint{0.066cm}{1.518cm}}{\pgfqpoint{0.04cm}{1.492cm}}
\pgfpathcurveto{\pgfqpoint{0.014cm}{1.467cm}}{\pgfqpoint{0cm}{1.432cm}}{\pgfqpoint{0cm}{1.395cm}}
\pgfpathcurveto{\pgfqpoint{0cm}{1.359cm}}{\pgfqpoint{0.014cm}{1.324cm}}{\pgfqpoint{0.04cm}{1.299cm}}
\pgfpathcurveto{\pgfqpoint{0.066cm}{1.273cm}}{\pgfqpoint{0.1cm}{1.258cm}}{\pgfqpoint{0.137cm}{1.258cm}}
\pgfpathcurveto{\pgfqpoint{0.173cm}{1.258cm}}{\pgfqpoint{0.207cm}{1.273cm}}{\pgfqpoint{0.233cm}{1.299cm}}
\pgfpathcurveto{\pgfqpoint{0.259cm}{1.324cm}}{\pgfqpoint{0.273cm}{1.359cm}}{\pgfqpoint{0.273cm}{1.395cm}}
\pgfusepath{fill}
\begin{pgfscope}
\pgfsetdash{}{0cm}
\pgfsetlinewidth{0.818mm}
\pgfsetmiterlimit{7.0}
\pgfpathmoveto{\pgfqpoint{0.682cm}{0.671cm}}
\pgfpathlineto{\pgfqpoint{0.679cm}{1.418cm}}
\pgfusepath{stroke}
\end{pgfscope}
\pgfpathmoveto{\pgfqpoint{0.815cm}{1.399cm}}
\pgfpathcurveto{\pgfqpoint{0.815cm}{1.435cm}}{\pgfqpoint{0.801cm}{1.47cm}}{\pgfqpoint{0.775cm}{1.496cm}}
\pgfpathcurveto{\pgfqpoint{0.75cm}{1.521cm}}{\pgfqpoint{0.715cm}{1.536cm}}{\pgfqpoint{0.679cm}{1.536cm}}
\pgfpathcurveto{\pgfqpoint{0.643cm}{1.536cm}}{\pgfqpoint{0.608cm}{1.521cm}}{\pgfqpoint{0.582cm}{1.496cm}}
\pgfpathcurveto{\pgfqpoint{0.557cm}{1.47cm}}{\pgfqpoint{0.542cm}{1.435cm}}{\pgfqpoint{0.542cm}{1.399cm}}
\pgfpathcurveto{\pgfqpoint{0.542cm}{1.363cm}}{\pgfqpoint{0.557cm}{1.328cm}}{\pgfqpoint{0.582cm}{1.302cm}}
\pgfpathcurveto{\pgfqpoint{0.608cm}{1.276cm}}{\pgfqpoint{0.643cm}{1.262cm}}{\pgfqpoint{0.679cm}{1.262cm}}
\pgfpathcurveto{\pgfqpoint{0.715cm}{1.262cm}}{\pgfqpoint{0.75cm}{1.276cm}}{\pgfqpoint{0.775cm}{1.302cm}}
\pgfpathcurveto{\pgfqpoint{0.801cm}{1.328cm}}{\pgfqpoint{0.815cm}{1.363cm}}{\pgfqpoint{0.815cm}{1.399cm}}
\pgfusepath{fill}
\pgfpathmoveto{\pgfqpoint{1.345cm}{1.371cm}}
\pgfpathcurveto{\pgfqpoint{1.345cm}{1.408cm}}{\pgfqpoint{1.331cm}{1.442cm}}{\pgfqpoint{1.305cm}{1.468cm}}
\pgfpathcurveto{\pgfqpoint{1.28cm}{1.494cm}}{\pgfqpoint{1.245cm}{1.508cm}}{\pgfqpoint{1.209cm}{1.508cm}}
\pgfpathcurveto{\pgfqpoint{1.172cm}{1.508cm}}{\pgfqpoint{1.138cm}{1.494cm}}{\pgfqpoint{1.112cm}{1.468cm}}
\pgfpathcurveto{\pgfqpoint{1.087cm}{1.442cm}}{\pgfqpoint{1.072cm}{1.408cm}}{\pgfqpoint{1.072cm}{1.371cm}}
\pgfpathcurveto{\pgfqpoint{1.072cm}{1.335cm}}{\pgfqpoint{1.087cm}{1.3cm}}{\pgfqpoint{1.112cm}{1.274cm}}
\pgfpathcurveto{\pgfqpoint{1.138cm}{1.249cm}}{\pgfqpoint{1.172cm}{1.234cm}}{\pgfqpoint{1.209cm}{1.234cm}}
\pgfpathcurveto{\pgfqpoint{1.245cm}{1.234cm}}{\pgfqpoint{1.28cm}{1.249cm}}{\pgfqpoint{1.305cm}{1.274cm}}
\pgfpathcurveto{\pgfqpoint{1.331cm}{1.3cm}}{\pgfqpoint{1.345cm}{1.335cm}}{\pgfqpoint{1.345cm}{1.371cm}}
\pgfusepath{fill}
\begin{pgfscope}
\pgfsetdash{}{0cm}
\pgfsetlinewidth{0.818mm}
\pgfsetroundcap
\pgfsetmiterlimit{4.0}
\pgfpathmoveto{\pgfqpoint{0.682cm}{0.671cm}}
\pgfpathlineto{\pgfqpoint{0.682cm}{0.042cm}}
\pgfusepath{stroke}
\end{pgfscope}
\end{pgfscope}
\end{pgfscope}
\end{pgfscope}
\end{tikzpicture}}} )
     (0) \right] \right| \]
  \[ \lesssim 2^{- 2 i (- 1 / 2 - \kappa)} 2^{- 2 i (1 / 2 - \kappa)}
     \mathbb{E} \left[ \| X \|_{\CC^{- 1 / 2 - \kappa} (\rho^{\sigma})}^2
     \| X^{\!\resizebox{0.6em}{!}{
\begin{tikzpicture}
\pgfpathmoveto{\pgfqpoint{0cm}{-0.035cm}}
\pgfpathlineto{\pgfqpoint{1.376cm}{-0.035cm}}
\pgfpathlineto{\pgfqpoint{1.376cm}{1.552cm}}
\pgfpathlineto{\pgfqpoint{0cm}{1.552cm}}
\pgfpathclose
\pgfusepath{clip}
\begin{pgfscope}
\begin{pgfscope}
\pgfpathmoveto{\pgfqpoint{0cm}{-0.035cm}}
\pgfpathlineto{\pgfqpoint{1.376cm}{-0.035cm}}
\pgfpathlineto{\pgfqpoint{1.376cm}{1.552cm}}
\pgfpathlineto{\pgfqpoint{0cm}{1.552cm}}
\pgfpathclose
\pgfusepath{clip}
\begin{pgfscope}
\begin{pgfscope}
\pgfsetdash{}{0cm}
\pgfsetlinewidth{0.818mm}
\pgfsetroundcap
\pgfsetroundjoin
\pgfsetmiterlimit{7.0}
\definecolor{eps2pgf_color}{gray}{0}\pgfsetstrokecolor{eps2pgf_color}\pgfsetfillcolor{eps2pgf_color}
\pgfpathmoveto{\pgfqpoint{0.117cm}{1.421cm}}
\pgfpathlineto{\pgfqpoint{0.682cm}{0.671cm}}
\pgfpathlineto{\pgfqpoint{1.246cm}{1.421cm}}
\pgfusepath{stroke}
\end{pgfscope}
\definecolor{eps2pgf_color}{gray}{0}\pgfsetstrokecolor{eps2pgf_color}\pgfsetfillcolor{eps2pgf_color}
\pgfpathmoveto{\pgfqpoint{0.273cm}{1.395cm}}
\pgfpathcurveto{\pgfqpoint{0.273cm}{1.432cm}}{\pgfqpoint{0.259cm}{1.467cm}}{\pgfqpoint{0.233cm}{1.492cm}}
\pgfpathcurveto{\pgfqpoint{0.207cm}{1.518cm}}{\pgfqpoint{0.173cm}{1.532cm}}{\pgfqpoint{0.137cm}{1.532cm}}
\pgfpathcurveto{\pgfqpoint{0.1cm}{1.532cm}}{\pgfqpoint{0.066cm}{1.518cm}}{\pgfqpoint{0.04cm}{1.492cm}}
\pgfpathcurveto{\pgfqpoint{0.014cm}{1.467cm}}{\pgfqpoint{0cm}{1.432cm}}{\pgfqpoint{0cm}{1.395cm}}
\pgfpathcurveto{\pgfqpoint{0cm}{1.359cm}}{\pgfqpoint{0.014cm}{1.324cm}}{\pgfqpoint{0.04cm}{1.299cm}}
\pgfpathcurveto{\pgfqpoint{0.066cm}{1.273cm}}{\pgfqpoint{0.1cm}{1.258cm}}{\pgfqpoint{0.137cm}{1.258cm}}
\pgfpathcurveto{\pgfqpoint{0.173cm}{1.258cm}}{\pgfqpoint{0.207cm}{1.273cm}}{\pgfqpoint{0.233cm}{1.299cm}}
\pgfpathcurveto{\pgfqpoint{0.259cm}{1.324cm}}{\pgfqpoint{0.273cm}{1.359cm}}{\pgfqpoint{0.273cm}{1.395cm}}
\pgfusepath{fill}
\begin{pgfscope}
\pgfsetdash{}{0cm}
\pgfsetlinewidth{0.818mm}
\pgfsetmiterlimit{7.0}
\pgfpathmoveto{\pgfqpoint{0.682cm}{0.671cm}}
\pgfpathlineto{\pgfqpoint{0.679cm}{1.418cm}}
\pgfusepath{stroke}
\end{pgfscope}
\pgfpathmoveto{\pgfqpoint{0.815cm}{1.399cm}}
\pgfpathcurveto{\pgfqpoint{0.815cm}{1.435cm}}{\pgfqpoint{0.801cm}{1.47cm}}{\pgfqpoint{0.775cm}{1.496cm}}
\pgfpathcurveto{\pgfqpoint{0.75cm}{1.521cm}}{\pgfqpoint{0.715cm}{1.536cm}}{\pgfqpoint{0.679cm}{1.536cm}}
\pgfpathcurveto{\pgfqpoint{0.643cm}{1.536cm}}{\pgfqpoint{0.608cm}{1.521cm}}{\pgfqpoint{0.582cm}{1.496cm}}
\pgfpathcurveto{\pgfqpoint{0.557cm}{1.47cm}}{\pgfqpoint{0.542cm}{1.435cm}}{\pgfqpoint{0.542cm}{1.399cm}}
\pgfpathcurveto{\pgfqpoint{0.542cm}{1.363cm}}{\pgfqpoint{0.557cm}{1.328cm}}{\pgfqpoint{0.582cm}{1.302cm}}
\pgfpathcurveto{\pgfqpoint{0.608cm}{1.276cm}}{\pgfqpoint{0.643cm}{1.262cm}}{\pgfqpoint{0.679cm}{1.262cm}}
\pgfpathcurveto{\pgfqpoint{0.715cm}{1.262cm}}{\pgfqpoint{0.75cm}{1.276cm}}{\pgfqpoint{0.775cm}{1.302cm}}
\pgfpathcurveto{\pgfqpoint{0.801cm}{1.328cm}}{\pgfqpoint{0.815cm}{1.363cm}}{\pgfqpoint{0.815cm}{1.399cm}}
\pgfusepath{fill}
\pgfpathmoveto{\pgfqpoint{1.345cm}{1.371cm}}
\pgfpathcurveto{\pgfqpoint{1.345cm}{1.408cm}}{\pgfqpoint{1.331cm}{1.442cm}}{\pgfqpoint{1.305cm}{1.468cm}}
\pgfpathcurveto{\pgfqpoint{1.28cm}{1.494cm}}{\pgfqpoint{1.245cm}{1.508cm}}{\pgfqpoint{1.209cm}{1.508cm}}
\pgfpathcurveto{\pgfqpoint{1.172cm}{1.508cm}}{\pgfqpoint{1.138cm}{1.494cm}}{\pgfqpoint{1.112cm}{1.468cm}}
\pgfpathcurveto{\pgfqpoint{1.087cm}{1.442cm}}{\pgfqpoint{1.072cm}{1.408cm}}{\pgfqpoint{1.072cm}{1.371cm}}
\pgfpathcurveto{\pgfqpoint{1.072cm}{1.335cm}}{\pgfqpoint{1.087cm}{1.3cm}}{\pgfqpoint{1.112cm}{1.274cm}}
\pgfpathcurveto{\pgfqpoint{1.138cm}{1.249cm}}{\pgfqpoint{1.172cm}{1.234cm}}{\pgfqpoint{1.209cm}{1.234cm}}
\pgfpathcurveto{\pgfqpoint{1.245cm}{1.234cm}}{\pgfqpoint{1.28cm}{1.249cm}}{\pgfqpoint{1.305cm}{1.274cm}}
\pgfpathcurveto{\pgfqpoint{1.331cm}{1.3cm}}{\pgfqpoint{1.345cm}{1.335cm}}{\pgfqpoint{1.345cm}{1.371cm}}
\pgfusepath{fill}
\begin{pgfscope}
\pgfsetdash{}{0cm}
\pgfsetlinewidth{0.818mm}
\pgfsetroundcap
\pgfsetmiterlimit{4.0}
\pgfpathmoveto{\pgfqpoint{0.682cm}{0.671cm}}
\pgfpathlineto{\pgfqpoint{0.682cm}{0.042cm}}
\pgfusepath{stroke}
\end{pgfscope}
\end{pgfscope}
\end{pgfscope}
\end{pgfscope}
\end{tikzpicture}}} \|_{\CC^{1 / 2 - \kappa} (\rho^{\sigma})}^2
     \right] \lesssim 2^{i4 \kappa} \]
  and similarly for the other terms without $\zeta$ which are collectively of order $2^{i4 \kappa} (\lambda^2+\lambda^4)$. For the remaining terms,
  we fix a weight $\rho$ as above and use stationarity. In addition, we shall
  be careful about having the necessary integrability. For instance, for the
  most irregular term we have
  \[ \mathbb{E} [(\Delta_i X)^3 (0) (\Delta_i \zeta) (0)] =
     \int_{\mathbb{R}^d} \rho^4 (x) \mathbb{E} [(\Delta_i X)^3 (x) (\Delta_i
     \zeta) (x)] \mathd x =\mathbb{E} \langle \rho^4, (\Delta_i X)^3 (\Delta_i
     \zeta) \rangle \]
  and we bound this quantity as
  \[\begin{aligned}
   | \mathbb{E} [(\Delta_i X)^3 (0) (\Delta_i \zeta) (0)] | & \leqslant
     \mathbb{E} [\| \Delta_i X_{\varepsilon} \|_{L^{\infty} (\rho^{\sigma})}^3
     \| \Delta_i \zeta \|_{L^1 (\rho^{4 - 3 \sigma})}] \lesssim \mathbb{E} [\|
     \Delta_i X_{\varepsilon} \|_{L^{\infty} (\rho^{\sigma})}^3 \| \Delta_i
     \zeta \|_{L^2 (\rho^2)}] \\
  & \lesssim 2^{- 3 i (- 1 / 2 - \kappa)} 2^{i (- 1 + 2 \kappa)} \mathbb{E}
     \left[ \| X \|^3_{\CC^{- 1 / 2 - \kappa} (\rho^{\sigma})} \| \zeta
     \|_{B^{1 - 2 \kappa}_{2, 2} (\rho^2)} \right]  \\
  & \lesssim 2^{- 3 i (- 1 / 2 - \kappa)} 2^{i (- 1 + 2 \kappa)} (\mathbb{E}
     [ \| X \|^6_{\CC^{- 1 / 2 - \kappa} (\rho^{\sigma})}])^{1/2} (\mathbb{E}[\| \zeta
     \|^2_{B^{1 - 2 \kappa}_{2, 2} (\rho^2)}])^{1/2} 
     \\ & \lesssim 2^{i (1 / 2 + 5
     \kappa)} (\lambda+\lambda^{7/2}).
     \end{aligned} 
     \]
 where we used Theorem~\ref{thm:main}. Next,
  \[ | \mathbb{E} [(\Delta_i X)^2 (0) (\Delta_i \zeta)^2 (0)] | \leqslant
     \mathbb{E} [\| \Delta_i X \|^2_{L^{\infty} (\rho^{\sigma})} \| \Delta_i
     \zeta \|_{L^2 (\rho^{1 + \iota})} \| \Delta_i \zeta \|_{L^2 (\rho^2)}] \]
  \[ \leqslant 2^{- 2 i (- 1 / 2 - \kappa)} 2^{- i (1 - 2 \kappa)} \mathbb{E}
     [\| X \|^2_{\CC^{-1/2-\kappa} (\rho^{\sigma})} \| \zeta \|_{B^0_{4, \infty}
     (\rho)} \|  \zeta \|_{H^{1 - 2 \kappa} (\rho^2)}] \lesssim 2^{i 4
     \kappa} ({\lambda^{5/4}+\lambda^{5}}), \]
  and
  \[ 
  \begin{aligned}
  | \mathbb{E} [(\Delta_i X) (0) (\Delta_i \zeta)^3 (0)] | & \leqslant
     \mathbb{E} [\| \Delta_i X \|_{L^{\infty} (\rho^{\sigma})} \| \Delta_i
     \zeta \|^3_{L^3 (\rho^{(4 - \sigma) / 3})}] \\
  & \leqslant \mathbb{E} [\| \Delta_i X \|_{L^{\infty} (\rho^{\sigma})} \|
     \Delta_i \zeta \|^3_{L^4 (\rho)}] \\ & \quad \lesssim 2^{- i (- 1 / 2 - \kappa)}
     \mathbb{E} \left[ \| X \|_{\CC^{- 1 / 2 - \kappa} (\rho^{\sigma})} \|
     \zeta \|^3_{B^0_{4, \infty} (\rho)} \right] \\
   & \lesssim 2^{i (1 / 2 +
     \kappa)}({\lambda^{3/4}+\lambda^{9/2}}),
     \end{aligned} \]
  \[ | \mathbb{E} [(\Delta_i \zeta)^4 (0)] | = | \mathbb{E} \langle \rho^4,
     (\Delta_i \zeta)^4 \rangle | \leqslant \mathbb{E} \| (\Delta_i \zeta)
     \|^4_{L^4 (\rho)} \leqslant \mathbb{E} [\| \zeta \|^4_{B^0_{4, \infty}
     (\rho)}] \lesssim ({\lambda+\lambda^{6}}). \]
  Proceeding similarly for the other terms we finally obtain the  bound   \[ | R | \lesssim  2^{i (1 / 2 + 5 \kappa)} ({\lambda^{3/4}+\lambda^{7}}). \]
  Therefore for a fixed $\lambda>0$ there exists  a sufficiently large $i$  such that
  \[ \mathbb{E} [(\Delta_i \varphi)^4 (0)] - 3 (\mathbb{E} [(\Delta_i
     \varphi)^2 (0)^2])^2 \lesssim -2^i \lambda < 0, \]
  and the proof is complete.
\end{proof}

\begin{remark}\label{rem:rev1}
To our knowledge, the proof of non-Gaussianity given above, is new. In particular the pathwise estimates of the PDE methods allow to probe correlation functions at high-momenta and check that they are, at leading order, given by perturbative contributions irrespective of the size of the coupling $\lambda$. This seems to be a substantial improvement with respect to the perturbative strategy of~\cite{MR723546} which requires small $\lambda$.
\end{remark}

\section{Integration by parts formula and Dyson--Schwinger equations}
\label{s:sd}

The goal of this section is twofold. First, we introduce a new
paracontrolled ansatz, which allows to prove higher regularity and in
particular to give meaning to the critical resonant product in the continuum.
Second, the higher regularity is used in order to improve the tightness 
and  to construct a renormalized cubic term $\llbracket
\varphi^3 \rrbracket$. 
Finally, we derive an integration by parts formula
together with the Dyson--Schwinger equations and we identify the continuum dynamics.

\subsection{Improved tightness}
\label{s:reg}

In this section we establish  higher order regularity and a better tightness which is needed in order
to define the resonant product $\llbracket X^2 \rrbracket \circ \phi$ in the
continuum limit. Recall that the equation {\eqref{eq:phi-eq}} satisfied by
$\phi_{M, \varepsilon}$ has the form
\begin{equation}
  \LL_{\varepsilon} \phi_{M, \varepsilon} = - 3 \lambda \llbracket X_{M,
  \varepsilon}^2 \rrbracket \succ \phi_{M, \varepsilon} + U_{M, \varepsilon},
  \label{eq:phiU}
\end{equation}
where
\[ \begin{array}{lll}
     U_{M, \varepsilon} & \assign & - 3 \lambda \llbracket X_{M, \varepsilon}^2
     \rrbracket \preccurlyeq (Y_{M, \varepsilon} + \phi_{M, \varepsilon}) - 3
     \lambda^2 b_{M, \varepsilon} (X_{M, \varepsilon} + Y_{M, \varepsilon} + \phi_{M,
     \varepsilon})\\
     &  & - 3 \lambda ( \UU^{\varepsilon}_{\leqslant} \llbracket X_{M,
     \varepsilon}^2 \rrbracket ) \succ Y_{M, \varepsilon} - 3\lambda X_{M,
     \varepsilon} (Y_{M, \varepsilon} + \phi_{M, \varepsilon})^2 -\lambda Y_{M,
     \varepsilon}^3 \\
     & & - 3\lambda Y_{M, \varepsilon}^2 \phi_{M, \varepsilon} - 3\lambda Y_{M,
     \varepsilon} \phi_{M, \varepsilon}^2 -\lambda \phi_{M, \varepsilon}^3 .
   \end{array} \]
If we let
\begin{equation}
  \chi_{M, \varepsilon} \assign \phi_{M, \varepsilon} + 3\lambda X_{M,
  \varepsilon}^{\!\resizebox{0.6em}{!}{
\begin{tikzpicture}
\pgfpathmoveto{\pgfqpoint{0cm}{0cm}}
\pgfpathlineto{\pgfqpoint{1.376cm}{0cm}}
\pgfpathlineto{\pgfqpoint{1.376cm}{1.588cm}}
\pgfpathlineto{\pgfqpoint{0cm}{1.588cm}}
\pgfpathclose
\pgfusepath{clip}
\begin{pgfscope}
\begin{pgfscope}
\pgfpathmoveto{\pgfqpoint{0cm}{0cm}}
\pgfpathlineto{\pgfqpoint{1.376cm}{0cm}}
\pgfpathlineto{\pgfqpoint{1.376cm}{1.588cm}}
\pgfpathlineto{\pgfqpoint{0cm}{1.588cm}}
\pgfpathclose
\pgfusepath{clip}
\begin{pgfscope}
\begin{pgfscope}
\definecolor{eps2pgf_color}{gray}{0.976471}\pgfsetstrokecolor{eps2pgf_color}\pgfsetfillcolor{eps2pgf_color}
\pgfpathmoveto{\pgfqpoint{0cm}{0cm}}
\pgfpathlineto{\pgfqpoint{1.376cm}{0cm}}
\pgfpathlineto{\pgfqpoint{1.376cm}{1.588cm}}
\pgfpathlineto{\pgfqpoint{0cm}{1.588cm}}
\pgfpathclose
\pgfusepath{fill}
\end{pgfscope}
\begin{pgfscope}
\pgfsetdash{}{0cm}
\pgfsetlinewidth{0.818mm}
\pgfsetroundcap
\pgfsetroundjoin
\pgfsetmiterlimit{7.0}
\definecolor{eps2pgf_color}{gray}{0}\pgfsetstrokecolor{eps2pgf_color}\pgfsetfillcolor{eps2pgf_color}
\pgfpathmoveto{\pgfqpoint{0.117cm}{1.476cm}}
\pgfpathlineto{\pgfqpoint{0.682cm}{0.726cm}}
\pgfpathlineto{\pgfqpoint{1.246cm}{1.476cm}}
\pgfusepath{stroke}
\end{pgfscope}
\definecolor{eps2pgf_color}{gray}{0}\pgfsetstrokecolor{eps2pgf_color}\pgfsetfillcolor{eps2pgf_color}
\pgfpathmoveto{\pgfqpoint{0.273cm}{1.451cm}}
\pgfpathcurveto{\pgfqpoint{0.273cm}{1.487cm}}{\pgfqpoint{0.259cm}{1.522cm}}{\pgfqpoint{0.233cm}{1.547cm}}
\pgfpathcurveto{\pgfqpoint{0.207cm}{1.573cm}}{\pgfqpoint{0.173cm}{1.588cm}}{\pgfqpoint{0.137cm}{1.588cm}}
\pgfpathcurveto{\pgfqpoint{0.1cm}{1.588cm}}{\pgfqpoint{0.066cm}{1.573cm}}{\pgfqpoint{0.04cm}{1.547cm}}
\pgfpathcurveto{\pgfqpoint{0.014cm}{1.522cm}}{\pgfqpoint{0cm}{1.487cm}}{\pgfqpoint{0cm}{1.451cm}}
\pgfpathcurveto{\pgfqpoint{0cm}{1.414cm}}{\pgfqpoint{0.014cm}{1.379cm}}{\pgfqpoint{0.04cm}{1.354cm}}
\pgfpathcurveto{\pgfqpoint{0.066cm}{1.328cm}}{\pgfqpoint{0.1cm}{1.314cm}}{\pgfqpoint{0.137cm}{1.314cm}}
\pgfpathcurveto{\pgfqpoint{0.173cm}{1.314cm}}{\pgfqpoint{0.207cm}{1.328cm}}{\pgfqpoint{0.233cm}{1.354cm}}
\pgfpathcurveto{\pgfqpoint{0.259cm}{1.379cm}}{\pgfqpoint{0.273cm}{1.414cm}}{\pgfqpoint{0.273cm}{1.451cm}}
\pgfusepath{fill}
\pgfpathmoveto{\pgfqpoint{1.345cm}{1.426cm}}
\pgfpathcurveto{\pgfqpoint{1.345cm}{1.463cm}}{\pgfqpoint{1.331cm}{1.497cm}}{\pgfqpoint{1.305cm}{1.523cm}}
\pgfpathcurveto{\pgfqpoint{1.28cm}{1.549cm}}{\pgfqpoint{1.245cm}{1.563cm}}{\pgfqpoint{1.209cm}{1.563cm}}
\pgfpathcurveto{\pgfqpoint{1.172cm}{1.563cm}}{\pgfqpoint{1.138cm}{1.549cm}}{\pgfqpoint{1.112cm}{1.523cm}}
\pgfpathcurveto{\pgfqpoint{1.087cm}{1.497cm}}{\pgfqpoint{1.072cm}{1.463cm}}{\pgfqpoint{1.072cm}{1.426cm}}
\pgfpathcurveto{\pgfqpoint{1.072cm}{1.39cm}}{\pgfqpoint{1.087cm}{1.355cm}}{\pgfqpoint{1.112cm}{1.329cm}}
\pgfpathcurveto{\pgfqpoint{1.138cm}{1.304cm}}{\pgfqpoint{1.172cm}{1.289cm}}{\pgfqpoint{1.209cm}{1.289cm}}
\pgfpathcurveto{\pgfqpoint{1.245cm}{1.289cm}}{\pgfqpoint{1.28cm}{1.304cm}}{\pgfqpoint{1.305cm}{1.329cm}}
\pgfpathcurveto{\pgfqpoint{1.331cm}{1.355cm}}{\pgfqpoint{1.345cm}{1.39cm}}{\pgfqpoint{1.345cm}{1.426cm}}
\pgfusepath{fill}
\begin{pgfscope}
\pgfsetdash{}{0cm}
\pgfsetlinewidth{0.818mm}
\pgfsetroundcap
\pgfsetmiterlimit{4.0}
\pgfpathmoveto{\pgfqpoint{0.682cm}{0.726cm}}
\pgfpathlineto{\pgfqpoint{0.682cm}{0.097cm}}
\pgfusepath{stroke}
\end{pgfscope}
\end{pgfscope}
\end{pgfscope}
\end{pgfscope}
\end{tikzpicture}}} \succ \phi_{M, \varepsilon} \label{eq:chi1},
\end{equation}
we obtain by the commutator lemma, Lemma~\ref{lem:comm1},
\begin{equation*}
  \begin{aligned}
    3\lambda \llbracket X_{M, \varepsilon}^2 \rrbracket \circ \phi_{M, \varepsilon} +
    3\lambda^2 b_{M, \varepsilon} \phi_{M, \varepsilon} & =  - 3\lambda \llbracket X_{M,
    \varepsilon}^2 \rrbracket \circ (3\lambda X_{M, \varepsilon}^{\!\resizebox{0.6em}{!}{
\begin{tikzpicture}
\pgfpathmoveto{\pgfqpoint{0cm}{0cm}}
\pgfpathlineto{\pgfqpoint{1.376cm}{0cm}}
\pgfpathlineto{\pgfqpoint{1.376cm}{1.588cm}}
\pgfpathlineto{\pgfqpoint{0cm}{1.588cm}}
\pgfpathclose
\pgfusepath{clip}
\begin{pgfscope}
\begin{pgfscope}
\pgfpathmoveto{\pgfqpoint{0cm}{0cm}}
\pgfpathlineto{\pgfqpoint{1.376cm}{0cm}}
\pgfpathlineto{\pgfqpoint{1.376cm}{1.588cm}}
\pgfpathlineto{\pgfqpoint{0cm}{1.588cm}}
\pgfpathclose
\pgfusepath{clip}
\begin{pgfscope}
\begin{pgfscope}
\definecolor{eps2pgf_color}{gray}{0.976471}\pgfsetstrokecolor{eps2pgf_color}\pgfsetfillcolor{eps2pgf_color}
\pgfpathmoveto{\pgfqpoint{0cm}{0cm}}
\pgfpathlineto{\pgfqpoint{1.376cm}{0cm}}
\pgfpathlineto{\pgfqpoint{1.376cm}{1.588cm}}
\pgfpathlineto{\pgfqpoint{0cm}{1.588cm}}
\pgfpathclose
\pgfusepath{fill}
\end{pgfscope}
\begin{pgfscope}
\pgfsetdash{}{0cm}
\pgfsetlinewidth{0.818mm}
\pgfsetroundcap
\pgfsetroundjoin
\pgfsetmiterlimit{7.0}
\definecolor{eps2pgf_color}{gray}{0}\pgfsetstrokecolor{eps2pgf_color}\pgfsetfillcolor{eps2pgf_color}
\pgfpathmoveto{\pgfqpoint{0.117cm}{1.476cm}}
\pgfpathlineto{\pgfqpoint{0.682cm}{0.726cm}}
\pgfpathlineto{\pgfqpoint{1.246cm}{1.476cm}}
\pgfusepath{stroke}
\end{pgfscope}
\definecolor{eps2pgf_color}{gray}{0}\pgfsetstrokecolor{eps2pgf_color}\pgfsetfillcolor{eps2pgf_color}
\pgfpathmoveto{\pgfqpoint{0.273cm}{1.451cm}}
\pgfpathcurveto{\pgfqpoint{0.273cm}{1.487cm}}{\pgfqpoint{0.259cm}{1.522cm}}{\pgfqpoint{0.233cm}{1.547cm}}
\pgfpathcurveto{\pgfqpoint{0.207cm}{1.573cm}}{\pgfqpoint{0.173cm}{1.588cm}}{\pgfqpoint{0.137cm}{1.588cm}}
\pgfpathcurveto{\pgfqpoint{0.1cm}{1.588cm}}{\pgfqpoint{0.066cm}{1.573cm}}{\pgfqpoint{0.04cm}{1.547cm}}
\pgfpathcurveto{\pgfqpoint{0.014cm}{1.522cm}}{\pgfqpoint{0cm}{1.487cm}}{\pgfqpoint{0cm}{1.451cm}}
\pgfpathcurveto{\pgfqpoint{0cm}{1.414cm}}{\pgfqpoint{0.014cm}{1.379cm}}{\pgfqpoint{0.04cm}{1.354cm}}
\pgfpathcurveto{\pgfqpoint{0.066cm}{1.328cm}}{\pgfqpoint{0.1cm}{1.314cm}}{\pgfqpoint{0.137cm}{1.314cm}}
\pgfpathcurveto{\pgfqpoint{0.173cm}{1.314cm}}{\pgfqpoint{0.207cm}{1.328cm}}{\pgfqpoint{0.233cm}{1.354cm}}
\pgfpathcurveto{\pgfqpoint{0.259cm}{1.379cm}}{\pgfqpoint{0.273cm}{1.414cm}}{\pgfqpoint{0.273cm}{1.451cm}}
\pgfusepath{fill}
\pgfpathmoveto{\pgfqpoint{1.345cm}{1.426cm}}
\pgfpathcurveto{\pgfqpoint{1.345cm}{1.463cm}}{\pgfqpoint{1.331cm}{1.497cm}}{\pgfqpoint{1.305cm}{1.523cm}}
\pgfpathcurveto{\pgfqpoint{1.28cm}{1.549cm}}{\pgfqpoint{1.245cm}{1.563cm}}{\pgfqpoint{1.209cm}{1.563cm}}
\pgfpathcurveto{\pgfqpoint{1.172cm}{1.563cm}}{\pgfqpoint{1.138cm}{1.549cm}}{\pgfqpoint{1.112cm}{1.523cm}}
\pgfpathcurveto{\pgfqpoint{1.087cm}{1.497cm}}{\pgfqpoint{1.072cm}{1.463cm}}{\pgfqpoint{1.072cm}{1.426cm}}
\pgfpathcurveto{\pgfqpoint{1.072cm}{1.39cm}}{\pgfqpoint{1.087cm}{1.355cm}}{\pgfqpoint{1.112cm}{1.329cm}}
\pgfpathcurveto{\pgfqpoint{1.138cm}{1.304cm}}{\pgfqpoint{1.172cm}{1.289cm}}{\pgfqpoint{1.209cm}{1.289cm}}
\pgfpathcurveto{\pgfqpoint{1.245cm}{1.289cm}}{\pgfqpoint{1.28cm}{1.304cm}}{\pgfqpoint{1.305cm}{1.329cm}}
\pgfpathcurveto{\pgfqpoint{1.331cm}{1.355cm}}{\pgfqpoint{1.345cm}{1.39cm}}{\pgfqpoint{1.345cm}{1.426cm}}
\pgfusepath{fill}
\begin{pgfscope}
\pgfsetdash{}{0cm}
\pgfsetlinewidth{0.818mm}
\pgfsetroundcap
\pgfsetmiterlimit{4.0}
\pgfpathmoveto{\pgfqpoint{0.682cm}{0.726cm}}
\pgfpathlineto{\pgfqpoint{0.682cm}{0.097cm}}
\pgfusepath{stroke}
\end{pgfscope}
\end{pgfscope}
\end{pgfscope}
\end{pgfscope}
\end{tikzpicture}}} \succ
    \phi_{{M, \varepsilon} }) + 3\lambda^2 b_{M, \varepsilon} \phi_{M, \varepsilon} 
    \\ & \quad + 3\lambda \llbracket X_{M,
    \varepsilon}^2 \rrbracket \circ \chi_{M, \varepsilon}\\
    & =  -\lambda^2 \widetilde{X_{}}_{M, \varepsilon}^{\!\resizebox{!}{.8em}{
\begin{tikzpicture}
\pgfpathmoveto{\pgfqpoint{0cm}{-0.035cm}}
\pgfpathlineto{\pgfqpoint{1.976cm}{-0.035cm}}
\pgfpathlineto{\pgfqpoint{1.976cm}{1.94cm}}
\pgfpathlineto{\pgfqpoint{0cm}{1.94cm}}
\pgfpathclose
\pgfusepath{clip}
\begin{pgfscope}
\begin{pgfscope}
\pgfpathmoveto{\pgfqpoint{0cm}{-0.035cm}}
\pgfpathlineto{\pgfqpoint{1.976cm}{-0.035cm}}
\pgfpathlineto{\pgfqpoint{1.976cm}{1.94cm}}
\pgfpathlineto{\pgfqpoint{0cm}{1.94cm}}
\pgfpathclose
\pgfusepath{clip}
\begin{pgfscope}
\begin{pgfscope}
\pgfsetdash{}{0cm}
\pgfsetlinewidth{0.818mm}
\pgfsetroundcap
\pgfsetroundjoin
\pgfsetmiterlimit{7.0}
\definecolor{eps2pgf_color}{gray}{0}\pgfsetstrokecolor{eps2pgf_color}\pgfsetfillcolor{eps2pgf_color}
\pgfpathmoveto{\pgfqpoint{0.117cm}{1.815cm}}
\pgfpathlineto{\pgfqpoint{0.682cm}{1.065cm}}
\pgfpathlineto{\pgfqpoint{1.246cm}{1.815cm}}
\pgfusepath{stroke}
\end{pgfscope}
\definecolor{eps2pgf_color}{gray}{0}\pgfsetstrokecolor{eps2pgf_color}\pgfsetfillcolor{eps2pgf_color}
\pgfpathmoveto{\pgfqpoint{0.273cm}{1.789cm}}
\pgfpathcurveto{\pgfqpoint{0.273cm}{1.825cm}}{\pgfqpoint{0.259cm}{1.86cm}}{\pgfqpoint{0.233cm}{1.886cm}}
\pgfpathcurveto{\pgfqpoint{0.207cm}{1.912cm}}{\pgfqpoint{0.173cm}{1.926cm}}{\pgfqpoint{0.137cm}{1.926cm}}
\pgfpathcurveto{\pgfqpoint{0.1cm}{1.926cm}}{\pgfqpoint{0.066cm}{1.912cm}}{\pgfqpoint{0.04cm}{1.886cm}}
\pgfpathcurveto{\pgfqpoint{0.014cm}{1.86cm}}{\pgfqpoint{0cm}{1.825cm}}{\pgfqpoint{0cm}{1.789cm}}
\pgfpathcurveto{\pgfqpoint{0cm}{1.753cm}}{\pgfqpoint{0.014cm}{1.718cm}}{\pgfqpoint{0.04cm}{1.692cm}}
\pgfpathcurveto{\pgfqpoint{0.066cm}{1.667cm}}{\pgfqpoint{0.1cm}{1.652cm}}{\pgfqpoint{0.137cm}{1.652cm}}
\pgfpathcurveto{\pgfqpoint{0.173cm}{1.652cm}}{\pgfqpoint{0.207cm}{1.667cm}}{\pgfqpoint{0.233cm}{1.692cm}}
\pgfpathcurveto{\pgfqpoint{0.259cm}{1.718cm}}{\pgfqpoint{0.273cm}{1.753cm}}{\pgfqpoint{0.273cm}{1.789cm}}
\pgfusepath{fill}
\pgfpathmoveto{\pgfqpoint{1.345cm}{1.765cm}}
\pgfpathcurveto{\pgfqpoint{1.345cm}{1.801cm}}{\pgfqpoint{1.331cm}{1.836cm}}{\pgfqpoint{1.305cm}{1.862cm}}
\pgfpathcurveto{\pgfqpoint{1.28cm}{1.887cm}}{\pgfqpoint{1.245cm}{1.902cm}}{\pgfqpoint{1.209cm}{1.902cm}}
\pgfpathcurveto{\pgfqpoint{1.172cm}{1.902cm}}{\pgfqpoint{1.138cm}{1.887cm}}{\pgfqpoint{1.112cm}{1.862cm}}
\pgfpathcurveto{\pgfqpoint{1.087cm}{1.836cm}}{\pgfqpoint{1.072cm}{1.801cm}}{\pgfqpoint{1.072cm}{1.765cm}}
\pgfpathcurveto{\pgfqpoint{1.072cm}{1.728cm}}{\pgfqpoint{1.087cm}{1.694cm}}{\pgfqpoint{1.112cm}{1.668cm}}
\pgfpathcurveto{\pgfqpoint{1.138cm}{1.642cm}}{\pgfqpoint{1.172cm}{1.628cm}}{\pgfqpoint{1.209cm}{1.628cm}}
\pgfpathcurveto{\pgfqpoint{1.245cm}{1.628cm}}{\pgfqpoint{1.28cm}{1.642cm}}{\pgfqpoint{1.305cm}{1.668cm}}
\pgfpathcurveto{\pgfqpoint{1.331cm}{1.694cm}}{\pgfqpoint{1.345cm}{1.728cm}}{\pgfqpoint{1.345cm}{1.765cm}}
\pgfusepath{fill}
\begin{pgfscope}
\pgfsetdash{}{0cm}
\pgfsetlinewidth{0.818mm}
\pgfsetroundcap
\pgfsetroundjoin
\pgfsetmiterlimit{7.0}
\pgfpathmoveto{\pgfqpoint{0.682cm}{1.065cm}}
\pgfpathlineto{\pgfqpoint{1.246cm}{0.315cm}}
\pgfpathlineto{\pgfqpoint{1.811cm}{1.065cm}}
\pgfusepath{stroke}
\end{pgfscope}
\pgfpathmoveto{\pgfqpoint{1.948cm}{1.065cm}}
\pgfpathcurveto{\pgfqpoint{1.948cm}{1.101cm}}{\pgfqpoint{1.933cm}{1.136cm}}{\pgfqpoint{1.907cm}{1.162cm}}
\pgfpathcurveto{\pgfqpoint{1.882cm}{1.187cm}}{\pgfqpoint{1.847cm}{1.202cm}}{\pgfqpoint{1.811cm}{1.202cm}}
\pgfpathcurveto{\pgfqpoint{1.775cm}{1.202cm}}{\pgfqpoint{1.74cm}{1.187cm}}{\pgfqpoint{1.714cm}{1.162cm}}
\pgfpathcurveto{\pgfqpoint{1.689cm}{1.136cm}}{\pgfqpoint{1.674cm}{1.101cm}}{\pgfqpoint{1.674cm}{1.065cm}}
\pgfpathcurveto{\pgfqpoint{1.674cm}{1.029cm}}{\pgfqpoint{1.689cm}{0.994cm}}{\pgfqpoint{1.714cm}{0.968cm}}
\pgfpathcurveto{\pgfqpoint{1.74cm}{0.942cm}}{\pgfqpoint{1.775cm}{0.928cm}}{\pgfqpoint{1.811cm}{0.928cm}}
\pgfpathcurveto{\pgfqpoint{1.847cm}{0.928cm}}{\pgfqpoint{1.882cm}{0.942cm}}{\pgfqpoint{1.907cm}{0.968cm}}
\pgfpathcurveto{\pgfqpoint{1.933cm}{0.994cm}}{\pgfqpoint{1.948cm}{1.029cm}}{\pgfqpoint{1.948cm}{1.065cm}}
\pgfusepath{fill}
\begin{pgfscope}
\pgfsetdash{}{0cm}
\pgfsetlinewidth{0.818mm}
\pgfsetmiterlimit{7.0}
\pgfpathmoveto{\pgfqpoint{1.246cm}{0.315cm}}
\pgfpathlineto{\pgfqpoint{1.244cm}{1.061cm}}
\pgfusepath{stroke}
\end{pgfscope}
\pgfpathmoveto{\pgfqpoint{1.38cm}{1.065cm}}
\pgfpathcurveto{\pgfqpoint{1.38cm}{1.101cm}}{\pgfqpoint{1.366cm}{1.136cm}}{\pgfqpoint{1.34cm}{1.162cm}}
\pgfpathcurveto{\pgfqpoint{1.315cm}{1.187cm}}{\pgfqpoint{1.28cm}{1.202cm}}{\pgfqpoint{1.244cm}{1.202cm}}
\pgfpathcurveto{\pgfqpoint{1.207cm}{1.202cm}}{\pgfqpoint{1.173cm}{1.187cm}}{\pgfqpoint{1.147cm}{1.162cm}}
\pgfpathcurveto{\pgfqpoint{1.121cm}{1.136cm}}{\pgfqpoint{1.107cm}{1.101cm}}{\pgfqpoint{1.107cm}{1.065cm}}
\pgfpathcurveto{\pgfqpoint{1.107cm}{1.029cm}}{\pgfqpoint{1.121cm}{0.994cm}}{\pgfqpoint{1.147cm}{0.968cm}}
\pgfpathcurveto{\pgfqpoint{1.173cm}{0.942cm}}{\pgfqpoint{1.207cm}{0.928cm}}{\pgfqpoint{1.244cm}{0.928cm}}
\pgfpathcurveto{\pgfqpoint{1.28cm}{0.928cm}}{\pgfqpoint{1.315cm}{0.942cm}}{\pgfqpoint{1.34cm}{0.968cm}}
\pgfpathcurveto{\pgfqpoint{1.366cm}{0.994cm}}{\pgfqpoint{1.38cm}{1.029cm}}{\pgfqpoint{1.38cm}{1.065cm}}
\pgfusepath{fill}
\begin{pgfscope}
\pgfsetdash{}{0cm}
\pgfsetlinewidth{0.818mm}
\pgfsetmiterlimit{4.0}
\pgfpathmoveto{\pgfqpoint{1.383cm}{0.178cm}}
\pgfpathcurveto{\pgfqpoint{1.383cm}{0.214cm}}{\pgfqpoint{1.369cm}{0.249cm}}{\pgfqpoint{1.343cm}{0.275cm}}
\pgfpathcurveto{\pgfqpoint{1.317cm}{0.3cm}}{\pgfqpoint{1.283cm}{0.315cm}}{\pgfqpoint{1.246cm}{0.315cm}}
\pgfpathcurveto{\pgfqpoint{1.21cm}{0.315cm}}{\pgfqpoint{1.175cm}{0.3cm}}{\pgfqpoint{1.15cm}{0.275cm}}
\pgfpathcurveto{\pgfqpoint{1.124cm}{0.249cm}}{\pgfqpoint{1.11cm}{0.214cm}}{\pgfqpoint{1.11cm}{0.178cm}}
\pgfpathcurveto{\pgfqpoint{1.11cm}{0.141cm}}{\pgfqpoint{1.124cm}{0.107cm}}{\pgfqpoint{1.15cm}{0.081cm}}
\pgfpathcurveto{\pgfqpoint{1.175cm}{0.055cm}}{\pgfqpoint{1.21cm}{0.041cm}}{\pgfqpoint{1.246cm}{0.041cm}}
\pgfpathcurveto{\pgfqpoint{1.283cm}{0.041cm}}{\pgfqpoint{1.317cm}{0.055cm}}{\pgfqpoint{1.343cm}{0.081cm}}
\pgfpathcurveto{\pgfqpoint{1.369cm}{0.107cm}}{\pgfqpoint{1.383cm}{0.141cm}}{\pgfqpoint{1.383cm}{0.178cm}}
\pgfusepath{stroke}
\end{pgfscope}
\end{pgfscope}
\end{pgfscope}
\end{pgfscope}
\end{tikzpicture}}} \phi_{M,
    \varepsilon} + 3\lambda^2 (b_{M, \varepsilon} - \tilde{b}_{M, \varepsilon} (t))
    \phi_{M, \varepsilon}\\
    &  \quad +\lambda^2 C_{\varepsilon} (\phi_{M, \varepsilon}, - 3 X_{M,
    \varepsilon}^{\!\resizebox{0.6em}{!}{
\begin{tikzpicture}
\pgfpathmoveto{\pgfqpoint{0cm}{0cm}}
\pgfpathlineto{\pgfqpoint{1.376cm}{0cm}}
\pgfpathlineto{\pgfqpoint{1.376cm}{1.588cm}}
\pgfpathlineto{\pgfqpoint{0cm}{1.588cm}}
\pgfpathclose
\pgfusepath{clip}
\begin{pgfscope}
\begin{pgfscope}
\pgfpathmoveto{\pgfqpoint{0cm}{0cm}}
\pgfpathlineto{\pgfqpoint{1.376cm}{0cm}}
\pgfpathlineto{\pgfqpoint{1.376cm}{1.588cm}}
\pgfpathlineto{\pgfqpoint{0cm}{1.588cm}}
\pgfpathclose
\pgfusepath{clip}
\begin{pgfscope}
\begin{pgfscope}
\definecolor{eps2pgf_color}{gray}{0.976471}\pgfsetstrokecolor{eps2pgf_color}\pgfsetfillcolor{eps2pgf_color}
\pgfpathmoveto{\pgfqpoint{0cm}{0cm}}
\pgfpathlineto{\pgfqpoint{1.376cm}{0cm}}
\pgfpathlineto{\pgfqpoint{1.376cm}{1.588cm}}
\pgfpathlineto{\pgfqpoint{0cm}{1.588cm}}
\pgfpathclose
\pgfusepath{fill}
\end{pgfscope}
\begin{pgfscope}
\pgfsetdash{}{0cm}
\pgfsetlinewidth{0.818mm}
\pgfsetroundcap
\pgfsetroundjoin
\pgfsetmiterlimit{7.0}
\definecolor{eps2pgf_color}{gray}{0}\pgfsetstrokecolor{eps2pgf_color}\pgfsetfillcolor{eps2pgf_color}
\pgfpathmoveto{\pgfqpoint{0.117cm}{1.476cm}}
\pgfpathlineto{\pgfqpoint{0.682cm}{0.726cm}}
\pgfpathlineto{\pgfqpoint{1.246cm}{1.476cm}}
\pgfusepath{stroke}
\end{pgfscope}
\definecolor{eps2pgf_color}{gray}{0}\pgfsetstrokecolor{eps2pgf_color}\pgfsetfillcolor{eps2pgf_color}
\pgfpathmoveto{\pgfqpoint{0.273cm}{1.451cm}}
\pgfpathcurveto{\pgfqpoint{0.273cm}{1.487cm}}{\pgfqpoint{0.259cm}{1.522cm}}{\pgfqpoint{0.233cm}{1.547cm}}
\pgfpathcurveto{\pgfqpoint{0.207cm}{1.573cm}}{\pgfqpoint{0.173cm}{1.588cm}}{\pgfqpoint{0.137cm}{1.588cm}}
\pgfpathcurveto{\pgfqpoint{0.1cm}{1.588cm}}{\pgfqpoint{0.066cm}{1.573cm}}{\pgfqpoint{0.04cm}{1.547cm}}
\pgfpathcurveto{\pgfqpoint{0.014cm}{1.522cm}}{\pgfqpoint{0cm}{1.487cm}}{\pgfqpoint{0cm}{1.451cm}}
\pgfpathcurveto{\pgfqpoint{0cm}{1.414cm}}{\pgfqpoint{0.014cm}{1.379cm}}{\pgfqpoint{0.04cm}{1.354cm}}
\pgfpathcurveto{\pgfqpoint{0.066cm}{1.328cm}}{\pgfqpoint{0.1cm}{1.314cm}}{\pgfqpoint{0.137cm}{1.314cm}}
\pgfpathcurveto{\pgfqpoint{0.173cm}{1.314cm}}{\pgfqpoint{0.207cm}{1.328cm}}{\pgfqpoint{0.233cm}{1.354cm}}
\pgfpathcurveto{\pgfqpoint{0.259cm}{1.379cm}}{\pgfqpoint{0.273cm}{1.414cm}}{\pgfqpoint{0.273cm}{1.451cm}}
\pgfusepath{fill}
\pgfpathmoveto{\pgfqpoint{1.345cm}{1.426cm}}
\pgfpathcurveto{\pgfqpoint{1.345cm}{1.463cm}}{\pgfqpoint{1.331cm}{1.497cm}}{\pgfqpoint{1.305cm}{1.523cm}}
\pgfpathcurveto{\pgfqpoint{1.28cm}{1.549cm}}{\pgfqpoint{1.245cm}{1.563cm}}{\pgfqpoint{1.209cm}{1.563cm}}
\pgfpathcurveto{\pgfqpoint{1.172cm}{1.563cm}}{\pgfqpoint{1.138cm}{1.549cm}}{\pgfqpoint{1.112cm}{1.523cm}}
\pgfpathcurveto{\pgfqpoint{1.087cm}{1.497cm}}{\pgfqpoint{1.072cm}{1.463cm}}{\pgfqpoint{1.072cm}{1.426cm}}
\pgfpathcurveto{\pgfqpoint{1.072cm}{1.39cm}}{\pgfqpoint{1.087cm}{1.355cm}}{\pgfqpoint{1.112cm}{1.329cm}}
\pgfpathcurveto{\pgfqpoint{1.138cm}{1.304cm}}{\pgfqpoint{1.172cm}{1.289cm}}{\pgfqpoint{1.209cm}{1.289cm}}
\pgfpathcurveto{\pgfqpoint{1.245cm}{1.289cm}}{\pgfqpoint{1.28cm}{1.304cm}}{\pgfqpoint{1.305cm}{1.329cm}}
\pgfpathcurveto{\pgfqpoint{1.331cm}{1.355cm}}{\pgfqpoint{1.345cm}{1.39cm}}{\pgfqpoint{1.345cm}{1.426cm}}
\pgfusepath{fill}
\begin{pgfscope}
\pgfsetdash{}{0cm}
\pgfsetlinewidth{0.818mm}
\pgfsetroundcap
\pgfsetmiterlimit{4.0}
\pgfpathmoveto{\pgfqpoint{0.682cm}{0.726cm}}
\pgfpathlineto{\pgfqpoint{0.682cm}{0.097cm}}
\pgfusepath{stroke}
\end{pgfscope}
\end{pgfscope}
\end{pgfscope}
\end{pgfscope}
\end{tikzpicture}}}, 3 \llbracket X_{M, \varepsilon}^2 \rrbracket) + 3\lambda
    \llbracket X_{M, \varepsilon}^2 \rrbracket \circ \chi_{M, \varepsilon} .
  \end{aligned} 
\end{equation*}
Recalling that $Z_{M, \varepsilon} = - 3\lambda^{-1} \llbracket X_{M, \varepsilon}^2
\rrbracket \circ Y_{M, \varepsilon} - 3 b_{M, \varepsilon} (X_{M, \varepsilon}
+ Y_{M, \varepsilon})$ can be rewritten as {\eqref{eq:def-Z}} and controlled
due to Lemma~\ref{lem:Z}, where we also estimated $X_{M, \varepsilon} Y_{M,
\varepsilon}$ and $X_{M, \varepsilon} Y^2_{M, \varepsilon}$, we deduce
\[ \begin{array}{lll}
     U_{M, \varepsilon} & = & -\lambda^2 \widetilde{X_{}}_{M, \varepsilon}^{\!\resizebox{!}{.8em}{
\begin{tikzpicture}
\pgfpathmoveto{\pgfqpoint{0cm}{-0.035cm}}
\pgfpathlineto{\pgfqpoint{1.976cm}{-0.035cm}}
\pgfpathlineto{\pgfqpoint{1.976cm}{1.94cm}}
\pgfpathlineto{\pgfqpoint{0cm}{1.94cm}}
\pgfpathclose
\pgfusepath{clip}
\begin{pgfscope}
\begin{pgfscope}
\pgfpathmoveto{\pgfqpoint{0cm}{-0.035cm}}
\pgfpathlineto{\pgfqpoint{1.976cm}{-0.035cm}}
\pgfpathlineto{\pgfqpoint{1.976cm}{1.94cm}}
\pgfpathlineto{\pgfqpoint{0cm}{1.94cm}}
\pgfpathclose
\pgfusepath{clip}
\begin{pgfscope}
\begin{pgfscope}
\pgfsetdash{}{0cm}
\pgfsetlinewidth{0.818mm}
\pgfsetroundcap
\pgfsetroundjoin
\pgfsetmiterlimit{7.0}
\definecolor{eps2pgf_color}{gray}{0}\pgfsetstrokecolor{eps2pgf_color}\pgfsetfillcolor{eps2pgf_color}
\pgfpathmoveto{\pgfqpoint{0.117cm}{1.815cm}}
\pgfpathlineto{\pgfqpoint{0.682cm}{1.065cm}}
\pgfpathlineto{\pgfqpoint{1.246cm}{1.815cm}}
\pgfusepath{stroke}
\end{pgfscope}
\definecolor{eps2pgf_color}{gray}{0}\pgfsetstrokecolor{eps2pgf_color}\pgfsetfillcolor{eps2pgf_color}
\pgfpathmoveto{\pgfqpoint{0.273cm}{1.789cm}}
\pgfpathcurveto{\pgfqpoint{0.273cm}{1.825cm}}{\pgfqpoint{0.259cm}{1.86cm}}{\pgfqpoint{0.233cm}{1.886cm}}
\pgfpathcurveto{\pgfqpoint{0.207cm}{1.912cm}}{\pgfqpoint{0.173cm}{1.926cm}}{\pgfqpoint{0.137cm}{1.926cm}}
\pgfpathcurveto{\pgfqpoint{0.1cm}{1.926cm}}{\pgfqpoint{0.066cm}{1.912cm}}{\pgfqpoint{0.04cm}{1.886cm}}
\pgfpathcurveto{\pgfqpoint{0.014cm}{1.86cm}}{\pgfqpoint{0cm}{1.825cm}}{\pgfqpoint{0cm}{1.789cm}}
\pgfpathcurveto{\pgfqpoint{0cm}{1.753cm}}{\pgfqpoint{0.014cm}{1.718cm}}{\pgfqpoint{0.04cm}{1.692cm}}
\pgfpathcurveto{\pgfqpoint{0.066cm}{1.667cm}}{\pgfqpoint{0.1cm}{1.652cm}}{\pgfqpoint{0.137cm}{1.652cm}}
\pgfpathcurveto{\pgfqpoint{0.173cm}{1.652cm}}{\pgfqpoint{0.207cm}{1.667cm}}{\pgfqpoint{0.233cm}{1.692cm}}
\pgfpathcurveto{\pgfqpoint{0.259cm}{1.718cm}}{\pgfqpoint{0.273cm}{1.753cm}}{\pgfqpoint{0.273cm}{1.789cm}}
\pgfusepath{fill}
\pgfpathmoveto{\pgfqpoint{1.345cm}{1.765cm}}
\pgfpathcurveto{\pgfqpoint{1.345cm}{1.801cm}}{\pgfqpoint{1.331cm}{1.836cm}}{\pgfqpoint{1.305cm}{1.862cm}}
\pgfpathcurveto{\pgfqpoint{1.28cm}{1.887cm}}{\pgfqpoint{1.245cm}{1.902cm}}{\pgfqpoint{1.209cm}{1.902cm}}
\pgfpathcurveto{\pgfqpoint{1.172cm}{1.902cm}}{\pgfqpoint{1.138cm}{1.887cm}}{\pgfqpoint{1.112cm}{1.862cm}}
\pgfpathcurveto{\pgfqpoint{1.087cm}{1.836cm}}{\pgfqpoint{1.072cm}{1.801cm}}{\pgfqpoint{1.072cm}{1.765cm}}
\pgfpathcurveto{\pgfqpoint{1.072cm}{1.728cm}}{\pgfqpoint{1.087cm}{1.694cm}}{\pgfqpoint{1.112cm}{1.668cm}}
\pgfpathcurveto{\pgfqpoint{1.138cm}{1.642cm}}{\pgfqpoint{1.172cm}{1.628cm}}{\pgfqpoint{1.209cm}{1.628cm}}
\pgfpathcurveto{\pgfqpoint{1.245cm}{1.628cm}}{\pgfqpoint{1.28cm}{1.642cm}}{\pgfqpoint{1.305cm}{1.668cm}}
\pgfpathcurveto{\pgfqpoint{1.331cm}{1.694cm}}{\pgfqpoint{1.345cm}{1.728cm}}{\pgfqpoint{1.345cm}{1.765cm}}
\pgfusepath{fill}
\begin{pgfscope}
\pgfsetdash{}{0cm}
\pgfsetlinewidth{0.818mm}
\pgfsetroundcap
\pgfsetroundjoin
\pgfsetmiterlimit{7.0}
\pgfpathmoveto{\pgfqpoint{0.682cm}{1.065cm}}
\pgfpathlineto{\pgfqpoint{1.246cm}{0.315cm}}
\pgfpathlineto{\pgfqpoint{1.811cm}{1.065cm}}
\pgfusepath{stroke}
\end{pgfscope}
\pgfpathmoveto{\pgfqpoint{1.948cm}{1.065cm}}
\pgfpathcurveto{\pgfqpoint{1.948cm}{1.101cm}}{\pgfqpoint{1.933cm}{1.136cm}}{\pgfqpoint{1.907cm}{1.162cm}}
\pgfpathcurveto{\pgfqpoint{1.882cm}{1.187cm}}{\pgfqpoint{1.847cm}{1.202cm}}{\pgfqpoint{1.811cm}{1.202cm}}
\pgfpathcurveto{\pgfqpoint{1.775cm}{1.202cm}}{\pgfqpoint{1.74cm}{1.187cm}}{\pgfqpoint{1.714cm}{1.162cm}}
\pgfpathcurveto{\pgfqpoint{1.689cm}{1.136cm}}{\pgfqpoint{1.674cm}{1.101cm}}{\pgfqpoint{1.674cm}{1.065cm}}
\pgfpathcurveto{\pgfqpoint{1.674cm}{1.029cm}}{\pgfqpoint{1.689cm}{0.994cm}}{\pgfqpoint{1.714cm}{0.968cm}}
\pgfpathcurveto{\pgfqpoint{1.74cm}{0.942cm}}{\pgfqpoint{1.775cm}{0.928cm}}{\pgfqpoint{1.811cm}{0.928cm}}
\pgfpathcurveto{\pgfqpoint{1.847cm}{0.928cm}}{\pgfqpoint{1.882cm}{0.942cm}}{\pgfqpoint{1.907cm}{0.968cm}}
\pgfpathcurveto{\pgfqpoint{1.933cm}{0.994cm}}{\pgfqpoint{1.948cm}{1.029cm}}{\pgfqpoint{1.948cm}{1.065cm}}
\pgfusepath{fill}
\begin{pgfscope}
\pgfsetdash{}{0cm}
\pgfsetlinewidth{0.818mm}
\pgfsetmiterlimit{7.0}
\pgfpathmoveto{\pgfqpoint{1.246cm}{0.315cm}}
\pgfpathlineto{\pgfqpoint{1.244cm}{1.061cm}}
\pgfusepath{stroke}
\end{pgfscope}
\pgfpathmoveto{\pgfqpoint{1.38cm}{1.065cm}}
\pgfpathcurveto{\pgfqpoint{1.38cm}{1.101cm}}{\pgfqpoint{1.366cm}{1.136cm}}{\pgfqpoint{1.34cm}{1.162cm}}
\pgfpathcurveto{\pgfqpoint{1.315cm}{1.187cm}}{\pgfqpoint{1.28cm}{1.202cm}}{\pgfqpoint{1.244cm}{1.202cm}}
\pgfpathcurveto{\pgfqpoint{1.207cm}{1.202cm}}{\pgfqpoint{1.173cm}{1.187cm}}{\pgfqpoint{1.147cm}{1.162cm}}
\pgfpathcurveto{\pgfqpoint{1.121cm}{1.136cm}}{\pgfqpoint{1.107cm}{1.101cm}}{\pgfqpoint{1.107cm}{1.065cm}}
\pgfpathcurveto{\pgfqpoint{1.107cm}{1.029cm}}{\pgfqpoint{1.121cm}{0.994cm}}{\pgfqpoint{1.147cm}{0.968cm}}
\pgfpathcurveto{\pgfqpoint{1.173cm}{0.942cm}}{\pgfqpoint{1.207cm}{0.928cm}}{\pgfqpoint{1.244cm}{0.928cm}}
\pgfpathcurveto{\pgfqpoint{1.28cm}{0.928cm}}{\pgfqpoint{1.315cm}{0.942cm}}{\pgfqpoint{1.34cm}{0.968cm}}
\pgfpathcurveto{\pgfqpoint{1.366cm}{0.994cm}}{\pgfqpoint{1.38cm}{1.029cm}}{\pgfqpoint{1.38cm}{1.065cm}}
\pgfusepath{fill}
\begin{pgfscope}
\pgfsetdash{}{0cm}
\pgfsetlinewidth{0.818mm}
\pgfsetmiterlimit{4.0}
\pgfpathmoveto{\pgfqpoint{1.383cm}{0.178cm}}
\pgfpathcurveto{\pgfqpoint{1.383cm}{0.214cm}}{\pgfqpoint{1.369cm}{0.249cm}}{\pgfqpoint{1.343cm}{0.275cm}}
\pgfpathcurveto{\pgfqpoint{1.317cm}{0.3cm}}{\pgfqpoint{1.283cm}{0.315cm}}{\pgfqpoint{1.246cm}{0.315cm}}
\pgfpathcurveto{\pgfqpoint{1.21cm}{0.315cm}}{\pgfqpoint{1.175cm}{0.3cm}}{\pgfqpoint{1.15cm}{0.275cm}}
\pgfpathcurveto{\pgfqpoint{1.124cm}{0.249cm}}{\pgfqpoint{1.11cm}{0.214cm}}{\pgfqpoint{1.11cm}{0.178cm}}
\pgfpathcurveto{\pgfqpoint{1.11cm}{0.141cm}}{\pgfqpoint{1.124cm}{0.107cm}}{\pgfqpoint{1.15cm}{0.081cm}}
\pgfpathcurveto{\pgfqpoint{1.175cm}{0.055cm}}{\pgfqpoint{1.21cm}{0.041cm}}{\pgfqpoint{1.246cm}{0.041cm}}
\pgfpathcurveto{\pgfqpoint{1.283cm}{0.041cm}}{\pgfqpoint{1.317cm}{0.055cm}}{\pgfqpoint{1.343cm}{0.081cm}}
\pgfpathcurveto{\pgfqpoint{1.369cm}{0.107cm}}{\pgfqpoint{1.383cm}{0.141cm}}{\pgfqpoint{1.383cm}{0.178cm}}
\pgfusepath{stroke}
\end{pgfscope}
\end{pgfscope}
\end{pgfscope}
\end{pgfscope}
\end{tikzpicture}}}
     \phi_{M, \varepsilon} + 3\lambda^2 (b_{M, \varepsilon} - \tilde{b}_{M,
     \varepsilon} (t)) \phi_{M, \varepsilon} +\lambda^2  C_{\varepsilon} (\phi_{M,
     \varepsilon}, - 3 X_{M, \varepsilon}^{\!\resizebox{0.6em}{!}{
\begin{tikzpicture}
\pgfpathmoveto{\pgfqpoint{0cm}{0cm}}
\pgfpathlineto{\pgfqpoint{1.376cm}{0cm}}
\pgfpathlineto{\pgfqpoint{1.376cm}{1.588cm}}
\pgfpathlineto{\pgfqpoint{0cm}{1.588cm}}
\pgfpathclose
\pgfusepath{clip}
\begin{pgfscope}
\begin{pgfscope}
\pgfpathmoveto{\pgfqpoint{0cm}{0cm}}
\pgfpathlineto{\pgfqpoint{1.376cm}{0cm}}
\pgfpathlineto{\pgfqpoint{1.376cm}{1.588cm}}
\pgfpathlineto{\pgfqpoint{0cm}{1.588cm}}
\pgfpathclose
\pgfusepath{clip}
\begin{pgfscope}
\begin{pgfscope}
\definecolor{eps2pgf_color}{gray}{0.976471}\pgfsetstrokecolor{eps2pgf_color}\pgfsetfillcolor{eps2pgf_color}
\pgfpathmoveto{\pgfqpoint{0cm}{0cm}}
\pgfpathlineto{\pgfqpoint{1.376cm}{0cm}}
\pgfpathlineto{\pgfqpoint{1.376cm}{1.588cm}}
\pgfpathlineto{\pgfqpoint{0cm}{1.588cm}}
\pgfpathclose
\pgfusepath{fill}
\end{pgfscope}
\begin{pgfscope}
\pgfsetdash{}{0cm}
\pgfsetlinewidth{0.818mm}
\pgfsetroundcap
\pgfsetroundjoin
\pgfsetmiterlimit{7.0}
\definecolor{eps2pgf_color}{gray}{0}\pgfsetstrokecolor{eps2pgf_color}\pgfsetfillcolor{eps2pgf_color}
\pgfpathmoveto{\pgfqpoint{0.117cm}{1.476cm}}
\pgfpathlineto{\pgfqpoint{0.682cm}{0.726cm}}
\pgfpathlineto{\pgfqpoint{1.246cm}{1.476cm}}
\pgfusepath{stroke}
\end{pgfscope}
\definecolor{eps2pgf_color}{gray}{0}\pgfsetstrokecolor{eps2pgf_color}\pgfsetfillcolor{eps2pgf_color}
\pgfpathmoveto{\pgfqpoint{0.273cm}{1.451cm}}
\pgfpathcurveto{\pgfqpoint{0.273cm}{1.487cm}}{\pgfqpoint{0.259cm}{1.522cm}}{\pgfqpoint{0.233cm}{1.547cm}}
\pgfpathcurveto{\pgfqpoint{0.207cm}{1.573cm}}{\pgfqpoint{0.173cm}{1.588cm}}{\pgfqpoint{0.137cm}{1.588cm}}
\pgfpathcurveto{\pgfqpoint{0.1cm}{1.588cm}}{\pgfqpoint{0.066cm}{1.573cm}}{\pgfqpoint{0.04cm}{1.547cm}}
\pgfpathcurveto{\pgfqpoint{0.014cm}{1.522cm}}{\pgfqpoint{0cm}{1.487cm}}{\pgfqpoint{0cm}{1.451cm}}
\pgfpathcurveto{\pgfqpoint{0cm}{1.414cm}}{\pgfqpoint{0.014cm}{1.379cm}}{\pgfqpoint{0.04cm}{1.354cm}}
\pgfpathcurveto{\pgfqpoint{0.066cm}{1.328cm}}{\pgfqpoint{0.1cm}{1.314cm}}{\pgfqpoint{0.137cm}{1.314cm}}
\pgfpathcurveto{\pgfqpoint{0.173cm}{1.314cm}}{\pgfqpoint{0.207cm}{1.328cm}}{\pgfqpoint{0.233cm}{1.354cm}}
\pgfpathcurveto{\pgfqpoint{0.259cm}{1.379cm}}{\pgfqpoint{0.273cm}{1.414cm}}{\pgfqpoint{0.273cm}{1.451cm}}
\pgfusepath{fill}
\pgfpathmoveto{\pgfqpoint{1.345cm}{1.426cm}}
\pgfpathcurveto{\pgfqpoint{1.345cm}{1.463cm}}{\pgfqpoint{1.331cm}{1.497cm}}{\pgfqpoint{1.305cm}{1.523cm}}
\pgfpathcurveto{\pgfqpoint{1.28cm}{1.549cm}}{\pgfqpoint{1.245cm}{1.563cm}}{\pgfqpoint{1.209cm}{1.563cm}}
\pgfpathcurveto{\pgfqpoint{1.172cm}{1.563cm}}{\pgfqpoint{1.138cm}{1.549cm}}{\pgfqpoint{1.112cm}{1.523cm}}
\pgfpathcurveto{\pgfqpoint{1.087cm}{1.497cm}}{\pgfqpoint{1.072cm}{1.463cm}}{\pgfqpoint{1.072cm}{1.426cm}}
\pgfpathcurveto{\pgfqpoint{1.072cm}{1.39cm}}{\pgfqpoint{1.087cm}{1.355cm}}{\pgfqpoint{1.112cm}{1.329cm}}
\pgfpathcurveto{\pgfqpoint{1.138cm}{1.304cm}}{\pgfqpoint{1.172cm}{1.289cm}}{\pgfqpoint{1.209cm}{1.289cm}}
\pgfpathcurveto{\pgfqpoint{1.245cm}{1.289cm}}{\pgfqpoint{1.28cm}{1.304cm}}{\pgfqpoint{1.305cm}{1.329cm}}
\pgfpathcurveto{\pgfqpoint{1.331cm}{1.355cm}}{\pgfqpoint{1.345cm}{1.39cm}}{\pgfqpoint{1.345cm}{1.426cm}}
\pgfusepath{fill}
\begin{pgfscope}
\pgfsetdash{}{0cm}
\pgfsetlinewidth{0.818mm}
\pgfsetroundcap
\pgfsetmiterlimit{4.0}
\pgfpathmoveto{\pgfqpoint{0.682cm}{0.726cm}}
\pgfpathlineto{\pgfqpoint{0.682cm}{0.097cm}}
\pgfusepath{stroke}
\end{pgfscope}
\end{pgfscope}
\end{pgfscope}
\end{pgfscope}
\end{tikzpicture}}}, 3 \llbracket X_{M,
     \varepsilon}^2 \rrbracket) 
     \\ & &  + 3\lambda \llbracket X_{M, \varepsilon}^2 \rrbracket
     \circ \chi_{M, \varepsilon}
     \\
     &  & +\lambda^2 Z_{M, \varepsilon} - 3\lambda \llbracket X_{M, \varepsilon}^2 \rrbracket \prec (Y_{M,
     \varepsilon} + \phi_{M, \varepsilon}) - 3\lambda (
     \UU^{\varepsilon}_{\leqslant} \llbracket X_{M, \varepsilon}^2 \rrbracket
     ) \succ Y_{M, \varepsilon} - 3\lambda X_{M, \varepsilon} Y^2_{M,
     \varepsilon} \\
     &  & - 6\lambda X_{M, \varepsilon} Y_{M, \varepsilon} \phi_{M,
     \varepsilon} - 3\lambda X_{M, \varepsilon} \phi_{M, \varepsilon}^2 -\lambda Y_{M,
     \varepsilon}^3 - 3\lambda Y_{M, \varepsilon}^2 \phi_{M, \varepsilon} - 3\lambda Y_{M,
     \varepsilon} \phi_{M, \varepsilon}^2 -\lambda \phi_{M, \varepsilon}^3 .
   \end{array} \]
Consequently, the equation satisfied by $\chi_{M, \varepsilon}$ reads
\begin{equation}
  \begin{array}{lll}
    \LL_{\varepsilon} \chi_{M, \varepsilon} & = & \LL_{\varepsilon} \phi_{M,
    \varepsilon} + 3\lambda \llbracket X_{M, \varepsilon}^2 \rrbracket \succ \phi_{M,
    \varepsilon} + 3 \lambdaX_{M, \varepsilon}^{\!\resizebox{0.6em}{!}{
\begin{tikzpicture}
\pgfpathmoveto{\pgfqpoint{0cm}{0cm}}
\pgfpathlineto{\pgfqpoint{1.376cm}{0cm}}
\pgfpathlineto{\pgfqpoint{1.376cm}{1.588cm}}
\pgfpathlineto{\pgfqpoint{0cm}{1.588cm}}
\pgfpathclose
\pgfusepath{clip}
\begin{pgfscope}
\begin{pgfscope}
\pgfpathmoveto{\pgfqpoint{0cm}{0cm}}
\pgfpathlineto{\pgfqpoint{1.376cm}{0cm}}
\pgfpathlineto{\pgfqpoint{1.376cm}{1.588cm}}
\pgfpathlineto{\pgfqpoint{0cm}{1.588cm}}
\pgfpathclose
\pgfusepath{clip}
\begin{pgfscope}
\begin{pgfscope}
\definecolor{eps2pgf_color}{gray}{0.976471}\pgfsetstrokecolor{eps2pgf_color}\pgfsetfillcolor{eps2pgf_color}
\pgfpathmoveto{\pgfqpoint{0cm}{0cm}}
\pgfpathlineto{\pgfqpoint{1.376cm}{0cm}}
\pgfpathlineto{\pgfqpoint{1.376cm}{1.588cm}}
\pgfpathlineto{\pgfqpoint{0cm}{1.588cm}}
\pgfpathclose
\pgfusepath{fill}
\end{pgfscope}
\begin{pgfscope}
\pgfsetdash{}{0cm}
\pgfsetlinewidth{0.818mm}
\pgfsetroundcap
\pgfsetroundjoin
\pgfsetmiterlimit{7.0}
\definecolor{eps2pgf_color}{gray}{0}\pgfsetstrokecolor{eps2pgf_color}\pgfsetfillcolor{eps2pgf_color}
\pgfpathmoveto{\pgfqpoint{0.117cm}{1.476cm}}
\pgfpathlineto{\pgfqpoint{0.682cm}{0.726cm}}
\pgfpathlineto{\pgfqpoint{1.246cm}{1.476cm}}
\pgfusepath{stroke}
\end{pgfscope}
\definecolor{eps2pgf_color}{gray}{0}\pgfsetstrokecolor{eps2pgf_color}\pgfsetfillcolor{eps2pgf_color}
\pgfpathmoveto{\pgfqpoint{0.273cm}{1.451cm}}
\pgfpathcurveto{\pgfqpoint{0.273cm}{1.487cm}}{\pgfqpoint{0.259cm}{1.522cm}}{\pgfqpoint{0.233cm}{1.547cm}}
\pgfpathcurveto{\pgfqpoint{0.207cm}{1.573cm}}{\pgfqpoint{0.173cm}{1.588cm}}{\pgfqpoint{0.137cm}{1.588cm}}
\pgfpathcurveto{\pgfqpoint{0.1cm}{1.588cm}}{\pgfqpoint{0.066cm}{1.573cm}}{\pgfqpoint{0.04cm}{1.547cm}}
\pgfpathcurveto{\pgfqpoint{0.014cm}{1.522cm}}{\pgfqpoint{0cm}{1.487cm}}{\pgfqpoint{0cm}{1.451cm}}
\pgfpathcurveto{\pgfqpoint{0cm}{1.414cm}}{\pgfqpoint{0.014cm}{1.379cm}}{\pgfqpoint{0.04cm}{1.354cm}}
\pgfpathcurveto{\pgfqpoint{0.066cm}{1.328cm}}{\pgfqpoint{0.1cm}{1.314cm}}{\pgfqpoint{0.137cm}{1.314cm}}
\pgfpathcurveto{\pgfqpoint{0.173cm}{1.314cm}}{\pgfqpoint{0.207cm}{1.328cm}}{\pgfqpoint{0.233cm}{1.354cm}}
\pgfpathcurveto{\pgfqpoint{0.259cm}{1.379cm}}{\pgfqpoint{0.273cm}{1.414cm}}{\pgfqpoint{0.273cm}{1.451cm}}
\pgfusepath{fill}
\pgfpathmoveto{\pgfqpoint{1.345cm}{1.426cm}}
\pgfpathcurveto{\pgfqpoint{1.345cm}{1.463cm}}{\pgfqpoint{1.331cm}{1.497cm}}{\pgfqpoint{1.305cm}{1.523cm}}
\pgfpathcurveto{\pgfqpoint{1.28cm}{1.549cm}}{\pgfqpoint{1.245cm}{1.563cm}}{\pgfqpoint{1.209cm}{1.563cm}}
\pgfpathcurveto{\pgfqpoint{1.172cm}{1.563cm}}{\pgfqpoint{1.138cm}{1.549cm}}{\pgfqpoint{1.112cm}{1.523cm}}
\pgfpathcurveto{\pgfqpoint{1.087cm}{1.497cm}}{\pgfqpoint{1.072cm}{1.463cm}}{\pgfqpoint{1.072cm}{1.426cm}}
\pgfpathcurveto{\pgfqpoint{1.072cm}{1.39cm}}{\pgfqpoint{1.087cm}{1.355cm}}{\pgfqpoint{1.112cm}{1.329cm}}
\pgfpathcurveto{\pgfqpoint{1.138cm}{1.304cm}}{\pgfqpoint{1.172cm}{1.289cm}}{\pgfqpoint{1.209cm}{1.289cm}}
\pgfpathcurveto{\pgfqpoint{1.245cm}{1.289cm}}{\pgfqpoint{1.28cm}{1.304cm}}{\pgfqpoint{1.305cm}{1.329cm}}
\pgfpathcurveto{\pgfqpoint{1.331cm}{1.355cm}}{\pgfqpoint{1.345cm}{1.39cm}}{\pgfqpoint{1.345cm}{1.426cm}}
\pgfusepath{fill}
\begin{pgfscope}
\pgfsetdash{}{0cm}
\pgfsetlinewidth{0.818mm}
\pgfsetroundcap
\pgfsetmiterlimit{4.0}
\pgfpathmoveto{\pgfqpoint{0.682cm}{0.726cm}}
\pgfpathlineto{\pgfqpoint{0.682cm}{0.097cm}}
\pgfusepath{stroke}
\end{pgfscope}
\end{pgfscope}
\end{pgfscope}
\end{pgfscope}
\end{tikzpicture}}} \succ \LL_{\varepsilon}
    \phi_{M, \varepsilon} - 6\lambda \nabla_{\varepsilon} X_{M,
    \varepsilon}^{\!\resizebox{0.6em}{!}{
\begin{tikzpicture}
\pgfpathmoveto{\pgfqpoint{0cm}{0cm}}
\pgfpathlineto{\pgfqpoint{1.376cm}{0cm}}
\pgfpathlineto{\pgfqpoint{1.376cm}{1.588cm}}
\pgfpathlineto{\pgfqpoint{0cm}{1.588cm}}
\pgfpathclose
\pgfusepath{clip}
\begin{pgfscope}
\begin{pgfscope}
\pgfpathmoveto{\pgfqpoint{0cm}{0cm}}
\pgfpathlineto{\pgfqpoint{1.376cm}{0cm}}
\pgfpathlineto{\pgfqpoint{1.376cm}{1.588cm}}
\pgfpathlineto{\pgfqpoint{0cm}{1.588cm}}
\pgfpathclose
\pgfusepath{clip}
\begin{pgfscope}
\begin{pgfscope}
\definecolor{eps2pgf_color}{gray}{0.976471}\pgfsetstrokecolor{eps2pgf_color}\pgfsetfillcolor{eps2pgf_color}
\pgfpathmoveto{\pgfqpoint{0cm}{0cm}}
\pgfpathlineto{\pgfqpoint{1.376cm}{0cm}}
\pgfpathlineto{\pgfqpoint{1.376cm}{1.588cm}}
\pgfpathlineto{\pgfqpoint{0cm}{1.588cm}}
\pgfpathclose
\pgfusepath{fill}
\end{pgfscope}
\begin{pgfscope}
\pgfsetdash{}{0cm}
\pgfsetlinewidth{0.818mm}
\pgfsetroundcap
\pgfsetroundjoin
\pgfsetmiterlimit{7.0}
\definecolor{eps2pgf_color}{gray}{0}\pgfsetstrokecolor{eps2pgf_color}\pgfsetfillcolor{eps2pgf_color}
\pgfpathmoveto{\pgfqpoint{0.117cm}{1.476cm}}
\pgfpathlineto{\pgfqpoint{0.682cm}{0.726cm}}
\pgfpathlineto{\pgfqpoint{1.246cm}{1.476cm}}
\pgfusepath{stroke}
\end{pgfscope}
\definecolor{eps2pgf_color}{gray}{0}\pgfsetstrokecolor{eps2pgf_color}\pgfsetfillcolor{eps2pgf_color}
\pgfpathmoveto{\pgfqpoint{0.273cm}{1.451cm}}
\pgfpathcurveto{\pgfqpoint{0.273cm}{1.487cm}}{\pgfqpoint{0.259cm}{1.522cm}}{\pgfqpoint{0.233cm}{1.547cm}}
\pgfpathcurveto{\pgfqpoint{0.207cm}{1.573cm}}{\pgfqpoint{0.173cm}{1.588cm}}{\pgfqpoint{0.137cm}{1.588cm}}
\pgfpathcurveto{\pgfqpoint{0.1cm}{1.588cm}}{\pgfqpoint{0.066cm}{1.573cm}}{\pgfqpoint{0.04cm}{1.547cm}}
\pgfpathcurveto{\pgfqpoint{0.014cm}{1.522cm}}{\pgfqpoint{0cm}{1.487cm}}{\pgfqpoint{0cm}{1.451cm}}
\pgfpathcurveto{\pgfqpoint{0cm}{1.414cm}}{\pgfqpoint{0.014cm}{1.379cm}}{\pgfqpoint{0.04cm}{1.354cm}}
\pgfpathcurveto{\pgfqpoint{0.066cm}{1.328cm}}{\pgfqpoint{0.1cm}{1.314cm}}{\pgfqpoint{0.137cm}{1.314cm}}
\pgfpathcurveto{\pgfqpoint{0.173cm}{1.314cm}}{\pgfqpoint{0.207cm}{1.328cm}}{\pgfqpoint{0.233cm}{1.354cm}}
\pgfpathcurveto{\pgfqpoint{0.259cm}{1.379cm}}{\pgfqpoint{0.273cm}{1.414cm}}{\pgfqpoint{0.273cm}{1.451cm}}
\pgfusepath{fill}
\pgfpathmoveto{\pgfqpoint{1.345cm}{1.426cm}}
\pgfpathcurveto{\pgfqpoint{1.345cm}{1.463cm}}{\pgfqpoint{1.331cm}{1.497cm}}{\pgfqpoint{1.305cm}{1.523cm}}
\pgfpathcurveto{\pgfqpoint{1.28cm}{1.549cm}}{\pgfqpoint{1.245cm}{1.563cm}}{\pgfqpoint{1.209cm}{1.563cm}}
\pgfpathcurveto{\pgfqpoint{1.172cm}{1.563cm}}{\pgfqpoint{1.138cm}{1.549cm}}{\pgfqpoint{1.112cm}{1.523cm}}
\pgfpathcurveto{\pgfqpoint{1.087cm}{1.497cm}}{\pgfqpoint{1.072cm}{1.463cm}}{\pgfqpoint{1.072cm}{1.426cm}}
\pgfpathcurveto{\pgfqpoint{1.072cm}{1.39cm}}{\pgfqpoint{1.087cm}{1.355cm}}{\pgfqpoint{1.112cm}{1.329cm}}
\pgfpathcurveto{\pgfqpoint{1.138cm}{1.304cm}}{\pgfqpoint{1.172cm}{1.289cm}}{\pgfqpoint{1.209cm}{1.289cm}}
\pgfpathcurveto{\pgfqpoint{1.245cm}{1.289cm}}{\pgfqpoint{1.28cm}{1.304cm}}{\pgfqpoint{1.305cm}{1.329cm}}
\pgfpathcurveto{\pgfqpoint{1.331cm}{1.355cm}}{\pgfqpoint{1.345cm}{1.39cm}}{\pgfqpoint{1.345cm}{1.426cm}}
\pgfusepath{fill}
\begin{pgfscope}
\pgfsetdash{}{0cm}
\pgfsetlinewidth{0.818mm}
\pgfsetroundcap
\pgfsetmiterlimit{4.0}
\pgfpathmoveto{\pgfqpoint{0.682cm}{0.726cm}}
\pgfpathlineto{\pgfqpoint{0.682cm}{0.097cm}}
\pgfusepath{stroke}
\end{pgfscope}
\end{pgfscope}
\end{pgfscope}
\end{pgfscope}
\end{tikzpicture}}} \succ \nabla_{\varepsilon} \phi_{M, \varepsilon}\\
    & = & U_{M, \varepsilon} + 3\lambda X_{M, \varepsilon}^{\!\resizebox{0.6em}{!}{
\begin{tikzpicture}
\pgfpathmoveto{\pgfqpoint{0cm}{0cm}}
\pgfpathlineto{\pgfqpoint{1.376cm}{0cm}}
\pgfpathlineto{\pgfqpoint{1.376cm}{1.588cm}}
\pgfpathlineto{\pgfqpoint{0cm}{1.588cm}}
\pgfpathclose
\pgfusepath{clip}
\begin{pgfscope}
\begin{pgfscope}
\pgfpathmoveto{\pgfqpoint{0cm}{0cm}}
\pgfpathlineto{\pgfqpoint{1.376cm}{0cm}}
\pgfpathlineto{\pgfqpoint{1.376cm}{1.588cm}}
\pgfpathlineto{\pgfqpoint{0cm}{1.588cm}}
\pgfpathclose
\pgfusepath{clip}
\begin{pgfscope}
\begin{pgfscope}
\definecolor{eps2pgf_color}{gray}{0.976471}\pgfsetstrokecolor{eps2pgf_color}\pgfsetfillcolor{eps2pgf_color}
\pgfpathmoveto{\pgfqpoint{0cm}{0cm}}
\pgfpathlineto{\pgfqpoint{1.376cm}{0cm}}
\pgfpathlineto{\pgfqpoint{1.376cm}{1.588cm}}
\pgfpathlineto{\pgfqpoint{0cm}{1.588cm}}
\pgfpathclose
\pgfusepath{fill}
\end{pgfscope}
\begin{pgfscope}
\pgfsetdash{}{0cm}
\pgfsetlinewidth{0.818mm}
\pgfsetroundcap
\pgfsetroundjoin
\pgfsetmiterlimit{7.0}
\definecolor{eps2pgf_color}{gray}{0}\pgfsetstrokecolor{eps2pgf_color}\pgfsetfillcolor{eps2pgf_color}
\pgfpathmoveto{\pgfqpoint{0.117cm}{1.476cm}}
\pgfpathlineto{\pgfqpoint{0.682cm}{0.726cm}}
\pgfpathlineto{\pgfqpoint{1.246cm}{1.476cm}}
\pgfusepath{stroke}
\end{pgfscope}
\definecolor{eps2pgf_color}{gray}{0}\pgfsetstrokecolor{eps2pgf_color}\pgfsetfillcolor{eps2pgf_color}
\pgfpathmoveto{\pgfqpoint{0.273cm}{1.451cm}}
\pgfpathcurveto{\pgfqpoint{0.273cm}{1.487cm}}{\pgfqpoint{0.259cm}{1.522cm}}{\pgfqpoint{0.233cm}{1.547cm}}
\pgfpathcurveto{\pgfqpoint{0.207cm}{1.573cm}}{\pgfqpoint{0.173cm}{1.588cm}}{\pgfqpoint{0.137cm}{1.588cm}}
\pgfpathcurveto{\pgfqpoint{0.1cm}{1.588cm}}{\pgfqpoint{0.066cm}{1.573cm}}{\pgfqpoint{0.04cm}{1.547cm}}
\pgfpathcurveto{\pgfqpoint{0.014cm}{1.522cm}}{\pgfqpoint{0cm}{1.487cm}}{\pgfqpoint{0cm}{1.451cm}}
\pgfpathcurveto{\pgfqpoint{0cm}{1.414cm}}{\pgfqpoint{0.014cm}{1.379cm}}{\pgfqpoint{0.04cm}{1.354cm}}
\pgfpathcurveto{\pgfqpoint{0.066cm}{1.328cm}}{\pgfqpoint{0.1cm}{1.314cm}}{\pgfqpoint{0.137cm}{1.314cm}}
\pgfpathcurveto{\pgfqpoint{0.173cm}{1.314cm}}{\pgfqpoint{0.207cm}{1.328cm}}{\pgfqpoint{0.233cm}{1.354cm}}
\pgfpathcurveto{\pgfqpoint{0.259cm}{1.379cm}}{\pgfqpoint{0.273cm}{1.414cm}}{\pgfqpoint{0.273cm}{1.451cm}}
\pgfusepath{fill}
\pgfpathmoveto{\pgfqpoint{1.345cm}{1.426cm}}
\pgfpathcurveto{\pgfqpoint{1.345cm}{1.463cm}}{\pgfqpoint{1.331cm}{1.497cm}}{\pgfqpoint{1.305cm}{1.523cm}}
\pgfpathcurveto{\pgfqpoint{1.28cm}{1.549cm}}{\pgfqpoint{1.245cm}{1.563cm}}{\pgfqpoint{1.209cm}{1.563cm}}
\pgfpathcurveto{\pgfqpoint{1.172cm}{1.563cm}}{\pgfqpoint{1.138cm}{1.549cm}}{\pgfqpoint{1.112cm}{1.523cm}}
\pgfpathcurveto{\pgfqpoint{1.087cm}{1.497cm}}{\pgfqpoint{1.072cm}{1.463cm}}{\pgfqpoint{1.072cm}{1.426cm}}
\pgfpathcurveto{\pgfqpoint{1.072cm}{1.39cm}}{\pgfqpoint{1.087cm}{1.355cm}}{\pgfqpoint{1.112cm}{1.329cm}}
\pgfpathcurveto{\pgfqpoint{1.138cm}{1.304cm}}{\pgfqpoint{1.172cm}{1.289cm}}{\pgfqpoint{1.209cm}{1.289cm}}
\pgfpathcurveto{\pgfqpoint{1.245cm}{1.289cm}}{\pgfqpoint{1.28cm}{1.304cm}}{\pgfqpoint{1.305cm}{1.329cm}}
\pgfpathcurveto{\pgfqpoint{1.331cm}{1.355cm}}{\pgfqpoint{1.345cm}{1.39cm}}{\pgfqpoint{1.345cm}{1.426cm}}
\pgfusepath{fill}
\begin{pgfscope}
\pgfsetdash{}{0cm}
\pgfsetlinewidth{0.818mm}
\pgfsetroundcap
\pgfsetmiterlimit{4.0}
\pgfpathmoveto{\pgfqpoint{0.682cm}{0.726cm}}
\pgfpathlineto{\pgfqpoint{0.682cm}{0.097cm}}
\pgfusepath{stroke}
\end{pgfscope}
\end{pgfscope}
\end{pgfscope}
\end{pgfscope}
\end{tikzpicture}}} \succ
    \LL_{\varepsilon} \phi - 6\lambda \nabla_{\varepsilon} X_{M,
    \varepsilon}^{\!\resizebox{0.6em}{!}{
\begin{tikzpicture}
\pgfpathmoveto{\pgfqpoint{0cm}{0cm}}
\pgfpathlineto{\pgfqpoint{1.376cm}{0cm}}
\pgfpathlineto{\pgfqpoint{1.376cm}{1.588cm}}
\pgfpathlineto{\pgfqpoint{0cm}{1.588cm}}
\pgfpathclose
\pgfusepath{clip}
\begin{pgfscope}
\begin{pgfscope}
\pgfpathmoveto{\pgfqpoint{0cm}{0cm}}
\pgfpathlineto{\pgfqpoint{1.376cm}{0cm}}
\pgfpathlineto{\pgfqpoint{1.376cm}{1.588cm}}
\pgfpathlineto{\pgfqpoint{0cm}{1.588cm}}
\pgfpathclose
\pgfusepath{clip}
\begin{pgfscope}
\begin{pgfscope}
\definecolor{eps2pgf_color}{gray}{0.976471}\pgfsetstrokecolor{eps2pgf_color}\pgfsetfillcolor{eps2pgf_color}
\pgfpathmoveto{\pgfqpoint{0cm}{0cm}}
\pgfpathlineto{\pgfqpoint{1.376cm}{0cm}}
\pgfpathlineto{\pgfqpoint{1.376cm}{1.588cm}}
\pgfpathlineto{\pgfqpoint{0cm}{1.588cm}}
\pgfpathclose
\pgfusepath{fill}
\end{pgfscope}
\begin{pgfscope}
\pgfsetdash{}{0cm}
\pgfsetlinewidth{0.818mm}
\pgfsetroundcap
\pgfsetroundjoin
\pgfsetmiterlimit{7.0}
\definecolor{eps2pgf_color}{gray}{0}\pgfsetstrokecolor{eps2pgf_color}\pgfsetfillcolor{eps2pgf_color}
\pgfpathmoveto{\pgfqpoint{0.117cm}{1.476cm}}
\pgfpathlineto{\pgfqpoint{0.682cm}{0.726cm}}
\pgfpathlineto{\pgfqpoint{1.246cm}{1.476cm}}
\pgfusepath{stroke}
\end{pgfscope}
\definecolor{eps2pgf_color}{gray}{0}\pgfsetstrokecolor{eps2pgf_color}\pgfsetfillcolor{eps2pgf_color}
\pgfpathmoveto{\pgfqpoint{0.273cm}{1.451cm}}
\pgfpathcurveto{\pgfqpoint{0.273cm}{1.487cm}}{\pgfqpoint{0.259cm}{1.522cm}}{\pgfqpoint{0.233cm}{1.547cm}}
\pgfpathcurveto{\pgfqpoint{0.207cm}{1.573cm}}{\pgfqpoint{0.173cm}{1.588cm}}{\pgfqpoint{0.137cm}{1.588cm}}
\pgfpathcurveto{\pgfqpoint{0.1cm}{1.588cm}}{\pgfqpoint{0.066cm}{1.573cm}}{\pgfqpoint{0.04cm}{1.547cm}}
\pgfpathcurveto{\pgfqpoint{0.014cm}{1.522cm}}{\pgfqpoint{0cm}{1.487cm}}{\pgfqpoint{0cm}{1.451cm}}
\pgfpathcurveto{\pgfqpoint{0cm}{1.414cm}}{\pgfqpoint{0.014cm}{1.379cm}}{\pgfqpoint{0.04cm}{1.354cm}}
\pgfpathcurveto{\pgfqpoint{0.066cm}{1.328cm}}{\pgfqpoint{0.1cm}{1.314cm}}{\pgfqpoint{0.137cm}{1.314cm}}
\pgfpathcurveto{\pgfqpoint{0.173cm}{1.314cm}}{\pgfqpoint{0.207cm}{1.328cm}}{\pgfqpoint{0.233cm}{1.354cm}}
\pgfpathcurveto{\pgfqpoint{0.259cm}{1.379cm}}{\pgfqpoint{0.273cm}{1.414cm}}{\pgfqpoint{0.273cm}{1.451cm}}
\pgfusepath{fill}
\pgfpathmoveto{\pgfqpoint{1.345cm}{1.426cm}}
\pgfpathcurveto{\pgfqpoint{1.345cm}{1.463cm}}{\pgfqpoint{1.331cm}{1.497cm}}{\pgfqpoint{1.305cm}{1.523cm}}
\pgfpathcurveto{\pgfqpoint{1.28cm}{1.549cm}}{\pgfqpoint{1.245cm}{1.563cm}}{\pgfqpoint{1.209cm}{1.563cm}}
\pgfpathcurveto{\pgfqpoint{1.172cm}{1.563cm}}{\pgfqpoint{1.138cm}{1.549cm}}{\pgfqpoint{1.112cm}{1.523cm}}
\pgfpathcurveto{\pgfqpoint{1.087cm}{1.497cm}}{\pgfqpoint{1.072cm}{1.463cm}}{\pgfqpoint{1.072cm}{1.426cm}}
\pgfpathcurveto{\pgfqpoint{1.072cm}{1.39cm}}{\pgfqpoint{1.087cm}{1.355cm}}{\pgfqpoint{1.112cm}{1.329cm}}
\pgfpathcurveto{\pgfqpoint{1.138cm}{1.304cm}}{\pgfqpoint{1.172cm}{1.289cm}}{\pgfqpoint{1.209cm}{1.289cm}}
\pgfpathcurveto{\pgfqpoint{1.245cm}{1.289cm}}{\pgfqpoint{1.28cm}{1.304cm}}{\pgfqpoint{1.305cm}{1.329cm}}
\pgfpathcurveto{\pgfqpoint{1.331cm}{1.355cm}}{\pgfqpoint{1.345cm}{1.39cm}}{\pgfqpoint{1.345cm}{1.426cm}}
\pgfusepath{fill}
\begin{pgfscope}
\pgfsetdash{}{0cm}
\pgfsetlinewidth{0.818mm}
\pgfsetroundcap
\pgfsetmiterlimit{4.0}
\pgfpathmoveto{\pgfqpoint{0.682cm}{0.726cm}}
\pgfpathlineto{\pgfqpoint{0.682cm}{0.097cm}}
\pgfusepath{stroke}
\end{pgfscope}
\end{pgfscope}
\end{pgfscope}
\end{pgfscope}
\end{tikzpicture}}} \succ \nabla_{\varepsilon} \phi_{M, \varepsilon}\\
    & = & U_{M, \varepsilon} + 3\lambda X_{M, \varepsilon}^{\!\resizebox{0.6em}{!}{
\begin{tikzpicture}
\pgfpathmoveto{\pgfqpoint{0cm}{0cm}}
\pgfpathlineto{\pgfqpoint{1.376cm}{0cm}}
\pgfpathlineto{\pgfqpoint{1.376cm}{1.588cm}}
\pgfpathlineto{\pgfqpoint{0cm}{1.588cm}}
\pgfpathclose
\pgfusepath{clip}
\begin{pgfscope}
\begin{pgfscope}
\pgfpathmoveto{\pgfqpoint{0cm}{0cm}}
\pgfpathlineto{\pgfqpoint{1.376cm}{0cm}}
\pgfpathlineto{\pgfqpoint{1.376cm}{1.588cm}}
\pgfpathlineto{\pgfqpoint{0cm}{1.588cm}}
\pgfpathclose
\pgfusepath{clip}
\begin{pgfscope}
\begin{pgfscope}
\definecolor{eps2pgf_color}{gray}{0.976471}\pgfsetstrokecolor{eps2pgf_color}\pgfsetfillcolor{eps2pgf_color}
\pgfpathmoveto{\pgfqpoint{0cm}{0cm}}
\pgfpathlineto{\pgfqpoint{1.376cm}{0cm}}
\pgfpathlineto{\pgfqpoint{1.376cm}{1.588cm}}
\pgfpathlineto{\pgfqpoint{0cm}{1.588cm}}
\pgfpathclose
\pgfusepath{fill}
\end{pgfscope}
\begin{pgfscope}
\pgfsetdash{}{0cm}
\pgfsetlinewidth{0.818mm}
\pgfsetroundcap
\pgfsetroundjoin
\pgfsetmiterlimit{7.0}
\definecolor{eps2pgf_color}{gray}{0}\pgfsetstrokecolor{eps2pgf_color}\pgfsetfillcolor{eps2pgf_color}
\pgfpathmoveto{\pgfqpoint{0.117cm}{1.476cm}}
\pgfpathlineto{\pgfqpoint{0.682cm}{0.726cm}}
\pgfpathlineto{\pgfqpoint{1.246cm}{1.476cm}}
\pgfusepath{stroke}
\end{pgfscope}
\definecolor{eps2pgf_color}{gray}{0}\pgfsetstrokecolor{eps2pgf_color}\pgfsetfillcolor{eps2pgf_color}
\pgfpathmoveto{\pgfqpoint{0.273cm}{1.451cm}}
\pgfpathcurveto{\pgfqpoint{0.273cm}{1.487cm}}{\pgfqpoint{0.259cm}{1.522cm}}{\pgfqpoint{0.233cm}{1.547cm}}
\pgfpathcurveto{\pgfqpoint{0.207cm}{1.573cm}}{\pgfqpoint{0.173cm}{1.588cm}}{\pgfqpoint{0.137cm}{1.588cm}}
\pgfpathcurveto{\pgfqpoint{0.1cm}{1.588cm}}{\pgfqpoint{0.066cm}{1.573cm}}{\pgfqpoint{0.04cm}{1.547cm}}
\pgfpathcurveto{\pgfqpoint{0.014cm}{1.522cm}}{\pgfqpoint{0cm}{1.487cm}}{\pgfqpoint{0cm}{1.451cm}}
\pgfpathcurveto{\pgfqpoint{0cm}{1.414cm}}{\pgfqpoint{0.014cm}{1.379cm}}{\pgfqpoint{0.04cm}{1.354cm}}
\pgfpathcurveto{\pgfqpoint{0.066cm}{1.328cm}}{\pgfqpoint{0.1cm}{1.314cm}}{\pgfqpoint{0.137cm}{1.314cm}}
\pgfpathcurveto{\pgfqpoint{0.173cm}{1.314cm}}{\pgfqpoint{0.207cm}{1.328cm}}{\pgfqpoint{0.233cm}{1.354cm}}
\pgfpathcurveto{\pgfqpoint{0.259cm}{1.379cm}}{\pgfqpoint{0.273cm}{1.414cm}}{\pgfqpoint{0.273cm}{1.451cm}}
\pgfusepath{fill}
\pgfpathmoveto{\pgfqpoint{1.345cm}{1.426cm}}
\pgfpathcurveto{\pgfqpoint{1.345cm}{1.463cm}}{\pgfqpoint{1.331cm}{1.497cm}}{\pgfqpoint{1.305cm}{1.523cm}}
\pgfpathcurveto{\pgfqpoint{1.28cm}{1.549cm}}{\pgfqpoint{1.245cm}{1.563cm}}{\pgfqpoint{1.209cm}{1.563cm}}
\pgfpathcurveto{\pgfqpoint{1.172cm}{1.563cm}}{\pgfqpoint{1.138cm}{1.549cm}}{\pgfqpoint{1.112cm}{1.523cm}}
\pgfpathcurveto{\pgfqpoint{1.087cm}{1.497cm}}{\pgfqpoint{1.072cm}{1.463cm}}{\pgfqpoint{1.072cm}{1.426cm}}
\pgfpathcurveto{\pgfqpoint{1.072cm}{1.39cm}}{\pgfqpoint{1.087cm}{1.355cm}}{\pgfqpoint{1.112cm}{1.329cm}}
\pgfpathcurveto{\pgfqpoint{1.138cm}{1.304cm}}{\pgfqpoint{1.172cm}{1.289cm}}{\pgfqpoint{1.209cm}{1.289cm}}
\pgfpathcurveto{\pgfqpoint{1.245cm}{1.289cm}}{\pgfqpoint{1.28cm}{1.304cm}}{\pgfqpoint{1.305cm}{1.329cm}}
\pgfpathcurveto{\pgfqpoint{1.331cm}{1.355cm}}{\pgfqpoint{1.345cm}{1.39cm}}{\pgfqpoint{1.345cm}{1.426cm}}
\pgfusepath{fill}
\begin{pgfscope}
\pgfsetdash{}{0cm}
\pgfsetlinewidth{0.818mm}
\pgfsetroundcap
\pgfsetmiterlimit{4.0}
\pgfpathmoveto{\pgfqpoint{0.682cm}{0.726cm}}
\pgfpathlineto{\pgfqpoint{0.682cm}{0.097cm}}
\pgfusepath{stroke}
\end{pgfscope}
\end{pgfscope}
\end{pgfscope}
\end{pgfscope}
\end{tikzpicture}}} \succ (- 3\lambda
    \llbracket X_{M, \varepsilon}^2 \rrbracket \succ \phi_{M, \varepsilon} +
    U_{M, \varepsilon}) - 6\lambda \nabla_{\varepsilon} X_{M, \varepsilon}^{\!\resizebox{0.6em}{!}{
\begin{tikzpicture}
\pgfpathmoveto{\pgfqpoint{0cm}{0cm}}
\pgfpathlineto{\pgfqpoint{1.376cm}{0cm}}
\pgfpathlineto{\pgfqpoint{1.376cm}{1.588cm}}
\pgfpathlineto{\pgfqpoint{0cm}{1.588cm}}
\pgfpathclose
\pgfusepath{clip}
\begin{pgfscope}
\begin{pgfscope}
\pgfpathmoveto{\pgfqpoint{0cm}{0cm}}
\pgfpathlineto{\pgfqpoint{1.376cm}{0cm}}
\pgfpathlineto{\pgfqpoint{1.376cm}{1.588cm}}
\pgfpathlineto{\pgfqpoint{0cm}{1.588cm}}
\pgfpathclose
\pgfusepath{clip}
\begin{pgfscope}
\begin{pgfscope}
\definecolor{eps2pgf_color}{gray}{0.976471}\pgfsetstrokecolor{eps2pgf_color}\pgfsetfillcolor{eps2pgf_color}
\pgfpathmoveto{\pgfqpoint{0cm}{0cm}}
\pgfpathlineto{\pgfqpoint{1.376cm}{0cm}}
\pgfpathlineto{\pgfqpoint{1.376cm}{1.588cm}}
\pgfpathlineto{\pgfqpoint{0cm}{1.588cm}}
\pgfpathclose
\pgfusepath{fill}
\end{pgfscope}
\begin{pgfscope}
\pgfsetdash{}{0cm}
\pgfsetlinewidth{0.818mm}
\pgfsetroundcap
\pgfsetroundjoin
\pgfsetmiterlimit{7.0}
\definecolor{eps2pgf_color}{gray}{0}\pgfsetstrokecolor{eps2pgf_color}\pgfsetfillcolor{eps2pgf_color}
\pgfpathmoveto{\pgfqpoint{0.117cm}{1.476cm}}
\pgfpathlineto{\pgfqpoint{0.682cm}{0.726cm}}
\pgfpathlineto{\pgfqpoint{1.246cm}{1.476cm}}
\pgfusepath{stroke}
\end{pgfscope}
\definecolor{eps2pgf_color}{gray}{0}\pgfsetstrokecolor{eps2pgf_color}\pgfsetfillcolor{eps2pgf_color}
\pgfpathmoveto{\pgfqpoint{0.273cm}{1.451cm}}
\pgfpathcurveto{\pgfqpoint{0.273cm}{1.487cm}}{\pgfqpoint{0.259cm}{1.522cm}}{\pgfqpoint{0.233cm}{1.547cm}}
\pgfpathcurveto{\pgfqpoint{0.207cm}{1.573cm}}{\pgfqpoint{0.173cm}{1.588cm}}{\pgfqpoint{0.137cm}{1.588cm}}
\pgfpathcurveto{\pgfqpoint{0.1cm}{1.588cm}}{\pgfqpoint{0.066cm}{1.573cm}}{\pgfqpoint{0.04cm}{1.547cm}}
\pgfpathcurveto{\pgfqpoint{0.014cm}{1.522cm}}{\pgfqpoint{0cm}{1.487cm}}{\pgfqpoint{0cm}{1.451cm}}
\pgfpathcurveto{\pgfqpoint{0cm}{1.414cm}}{\pgfqpoint{0.014cm}{1.379cm}}{\pgfqpoint{0.04cm}{1.354cm}}
\pgfpathcurveto{\pgfqpoint{0.066cm}{1.328cm}}{\pgfqpoint{0.1cm}{1.314cm}}{\pgfqpoint{0.137cm}{1.314cm}}
\pgfpathcurveto{\pgfqpoint{0.173cm}{1.314cm}}{\pgfqpoint{0.207cm}{1.328cm}}{\pgfqpoint{0.233cm}{1.354cm}}
\pgfpathcurveto{\pgfqpoint{0.259cm}{1.379cm}}{\pgfqpoint{0.273cm}{1.414cm}}{\pgfqpoint{0.273cm}{1.451cm}}
\pgfusepath{fill}
\pgfpathmoveto{\pgfqpoint{1.345cm}{1.426cm}}
\pgfpathcurveto{\pgfqpoint{1.345cm}{1.463cm}}{\pgfqpoint{1.331cm}{1.497cm}}{\pgfqpoint{1.305cm}{1.523cm}}
\pgfpathcurveto{\pgfqpoint{1.28cm}{1.549cm}}{\pgfqpoint{1.245cm}{1.563cm}}{\pgfqpoint{1.209cm}{1.563cm}}
\pgfpathcurveto{\pgfqpoint{1.172cm}{1.563cm}}{\pgfqpoint{1.138cm}{1.549cm}}{\pgfqpoint{1.112cm}{1.523cm}}
\pgfpathcurveto{\pgfqpoint{1.087cm}{1.497cm}}{\pgfqpoint{1.072cm}{1.463cm}}{\pgfqpoint{1.072cm}{1.426cm}}
\pgfpathcurveto{\pgfqpoint{1.072cm}{1.39cm}}{\pgfqpoint{1.087cm}{1.355cm}}{\pgfqpoint{1.112cm}{1.329cm}}
\pgfpathcurveto{\pgfqpoint{1.138cm}{1.304cm}}{\pgfqpoint{1.172cm}{1.289cm}}{\pgfqpoint{1.209cm}{1.289cm}}
\pgfpathcurveto{\pgfqpoint{1.245cm}{1.289cm}}{\pgfqpoint{1.28cm}{1.304cm}}{\pgfqpoint{1.305cm}{1.329cm}}
\pgfpathcurveto{\pgfqpoint{1.331cm}{1.355cm}}{\pgfqpoint{1.345cm}{1.39cm}}{\pgfqpoint{1.345cm}{1.426cm}}
\pgfusepath{fill}
\begin{pgfscope}
\pgfsetdash{}{0cm}
\pgfsetlinewidth{0.818mm}
\pgfsetroundcap
\pgfsetmiterlimit{4.0}
\pgfpathmoveto{\pgfqpoint{0.682cm}{0.726cm}}
\pgfpathlineto{\pgfqpoint{0.682cm}{0.097cm}}
\pgfusepath{stroke}
\end{pgfscope}
\end{pgfscope}
\end{pgfscope}
\end{pgfscope}
\end{tikzpicture}}}
    \succ \nabla_{\varepsilon} \phi_{M, \varepsilon},
  \end{array} \label{eq:chi11}
\end{equation}
where the bilinear form $\nabla_{\varepsilon} f \prec \nabla_{\varepsilon} g$
is defined by
\[ \nabla_{\varepsilon} f \prec \nabla_{\varepsilon} g \assign \frac{1}{2}
   (\Delta_{\varepsilon} (f \prec g) - \Delta_{\varepsilon} f \prec g - f
   \prec \Delta_{\varepsilon} g) \]
and can be controlled as in the proof of Lemma~\ref{lem:comm1}.

Next, we state a regularity result for $\chi_{M,\varepsilon}$, proof of which is postponed to Appendix \ref{s:chi-reg}. While it is in principle possible to keep track of the exact dependence of the bounds on $\lambda$ we do not pursue it any further since there seems to be no interesting application of such bounds. Nevertheless, it can be checked that the bounds in this section remain uniform over $\lambda$ belonging to any bounded subset of $[0,\infty)$.

\begin{proposition}
  \label{prop:reg}Let $\rho$ be a weight such that $\rho^{\iota} \in L^{4, 0}$
  for some $\iota \in (0, 1)$. Let $\phi_{M, \varepsilon}$ be a solution to
  {\eqref{eq:phiU}} and let $\chi_{M, \varepsilon}$ be given by
  {\eqref{eq:chi1}}. Then
  \[ \| \rho^4 \chi_{M, \varepsilon} \|_{L^1_T B_{1, 1}^{1 + 3 \kappa,
     \varepsilon}} \leqslant C_{T,m^2,\lambda} Q_{\rho} (\mathbb{X}_{M,\varepsilon}) (1+\| \rho^2 \phi_{
     M,\varepsilon} (0)\|_{L^{2, \varepsilon}}).\]
\end{proposition}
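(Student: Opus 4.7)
My plan is to apply Schauder estimates for $\LL_\varepsilon^{-1}$ in the parabolic $L^1_T$ scale of weighted Besov spaces. Since these gain two spatial derivatives (up to an arbitrarily small loss), it suffices to bound the right-hand side of \eqref{eq:chi11} in $L^1_TB^{-1+3\kappa,\varepsilon}_{1,1}(\rho^4)$ uniformly in $M,\varepsilon$, together with the initial datum $\chi_{M,\varepsilon}(0)=\phi_{M,\varepsilon}(0)+3\lambda\,\ttwoone{X_{M,\varepsilon}}(0)\succ\phi_{M,\varepsilon}(0)$ in $B^{-1+3\kappa,\varepsilon}_{1,1}(\rho^4)$; the contribution of the initial datum reduces to $\|\rho^2\phi_{M,\varepsilon}(0)\|_{L^{2,\varepsilon}}$ and $Q_\rho(\mathbb{X}_{M,\varepsilon})$ via the embedding $L^{2,\varepsilon}(\rho^2)\hookrightarrow B^{-1+3\kappa,\varepsilon}_{1,1}(\rho^4)$ enabled by the integrability of $\rho^\iota$ (Lemma~\ref{lem:15}) and a paraproduct bound.

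All the polynomial terms in $U_{M,\varepsilon}$ involving $X_{M,\varepsilon}$, $Y_{M,\varepsilon}$ and $\phi_{M,\varepsilon}$ would be controlled in $L^1_TB^{-1+3\kappa,\varepsilon}_{1,1}(\rho^4)$ by combining: the uniform bounds on the stochastic data from Section~\ref{ssec:stoch}, Lemma~\ref{lem:Y1} and Lemma~\ref{lem:Z}; the multiplication estimate of Lemma~\ref{lem:mult} together with standard paraproduct calculus; H\"older's inequality in time; and the integrated bound $\phi_{M,\varepsilon}\in L^2_TH^{1-2\kappa,\varepsilon}(\rho^2)\cap L^4_TL^{4,\varepsilon}(\rho)$ with right-hand side of the form $C_{T,m^2,\lambda}Q_\rho(\mathbb{X}_{M,\varepsilon})+\|\rho^2\phi_{M,\varepsilon}(0)\|^2_{L^{2,\varepsilon}}$ that one reads off Theorem~\ref{th:energy-estimate} after integration in time. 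The two paraproduct contributions $3\lambda\,\ttwoone{X_{M,\varepsilon}}\succ(-3\lambda\llbracket X^2_{M,\varepsilon}\rrbracket\succ\phi_{M,\varepsilon}+U_{M,\varepsilon})$ and $-6\lambda\,\nabla_\varepsilon\ttwoone{X_{M,\varepsilon}}\succ\nabla_\varepsilon\phi_{M,\varepsilon}$ reduce to bounds of the same type, since paraproducts preserve the regularity of the second factor at the cost of the $L^\infty$-norm of the first, and $\|\ttwoone{X_{M,\varepsilon}}\|_{C_T\CC^{1/2-\kappa,\varepsilon}(\rho^\sigma)}\lesssim 1$ uniformly.

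The main obstacle is the implicit appearance of $\chi_{M,\varepsilon}$ inside $U_{M,\varepsilon}$ through the resonant product $3\lambda\,\llbracket X^2_{M,\varepsilon}\rrbracket\circ\chi_{M,\varepsilon}$. With $\|\llbracket X^2_{M,\varepsilon}\rrbracket\|_{C_T\CC^{-1-\kappa,\varepsilon}(\rho^\sigma)}\lesssim 1$ the resonant product estimate gives
\[
\|\rho^4\,\llbracket X^2_{M,\varepsilon}\rrbracket\circ\chi_{M,\varepsilon}\|_{B^{-1+3\kappa,\varepsilon}_{1,1}}\lesssim \|\llbracket X^2_{M,\varepsilon}\rrbracket\|_{\CC^{-1-\kappa,\varepsilon}(\rho^\sigma)}\,\|\chi_{M,\varepsilon}\|_{B^{\alpha,\varepsilon}_{1,1}(\rho^{4-\sigma})}
\]
for some $\alpha\in(1+\kappa,1+3\kappa)$, i.e.\ in an intermediate regularity strictly weaker than the target one. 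Interpolating this intermediate norm between the target $B^{1+3\kappa,\varepsilon}_{1,1}(\rho^4)$ bound and the $L^1_TL^{1,\varepsilon}(\rho^4)$ bound on $\chi_{M,\varepsilon}$, which follows at once from the defining paraproduct decomposition of $\chi_{M,\varepsilon}$ together with the $L^2_TL^{2,\varepsilon}(\rho^2)$ control on $\phi_{M,\varepsilon}$, and then applying Young's inequality, the contribution of this term to the right-hand side can be dominated by $\delta\,\|\chi_{M,\varepsilon}\|_{L^1_TB^{1+3\kappa,\varepsilon}_{1,1}(\rho^4)}+C_\delta\,Q_\rho(\mathbb{X}_{M,\varepsilon})(1+\|\rho^2\phi_{M,\varepsilon}(0)\|_{L^{2,\varepsilon}})$ for any $\delta>0$. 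Choosing $\delta$ small and absorbing the first summand into the left-hand side of the Schauder bound, then collecting all remaining contributions, yields the desired uniform estimate.
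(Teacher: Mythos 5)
Your overall strategy --- Schauder estimate from Lemma~\ref{lem:reg}, termwise control of $U_{M,\varepsilon}$ and the other contributions to $\LL_\varepsilon\chi_{M,\varepsilon}$, and absorption of the implicit $\llbracket X^2_{M,\varepsilon}\rrbracket\circ\chi_{M,\varepsilon}$ term via interpolation and Young's inequality --- is exactly the paper's. One small simplification you missed: by the convention that $\LL_\varepsilon^{-1}$ produces solutions with vanishing initial data one has $\ttwoone{X_{M,\varepsilon}}(0)=0$, hence $\chi_{M,\varepsilon}(0)=\phi_{M,\varepsilon}(0)$ exactly; this is harmless since your bound for the initial datum goes through regardless.

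The absorption step, however, does need repair: the weights in your interpolation do not match the term you must dominate. To estimate $\|\rho^4\,\llbracket X^2_{M,\varepsilon}\rrbracket\circ\chi_{M,\varepsilon}\|_{B^{-1+3\kappa,\varepsilon}_{1,1}}$ you must spend a factor $\rho^\sigma$ on $\llbracket X^2_{M,\varepsilon}\rrbracket\in\CC^{-1-\kappa,\varepsilon}(\rho^\sigma)$, which leaves you with a bound of the form $Q_\rho(\mathbb{X}_{M,\varepsilon})\,\|\rho^{4-\sigma'}\chi_{M,\varepsilon}\|_{B^{1+2\kappa,\varepsilon}_{1,1}}$ for some $\sigma'>0$, i.e.\ with a \emph{weaker} weight than $\rho^4$. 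Interpolating between $B^{1+3\kappa,\varepsilon}_{1,1}(\rho^4)$ and $L^1_TL^{1,\varepsilon}(\rho^4)$, both carrying the full weight $\rho^4$, produces a bound only for $\|\rho^4\chi_{M,\varepsilon}\|_{B^{1+2\kappa,\varepsilon}_{1,1}}$, which (since $0<\rho\leqslant1$) is a strictly smaller quantity than $\|\rho^{4-\sigma'}\chi_{M,\varepsilon}\|_{B^{1+2\kappa,\varepsilon}_{1,1}}$, so the interpolation as written does not dominate the term you need. The remedy, which is what the paper does, is to interpolate the weight exponent simultaneously with the regularity using Lemma~\ref{lem:int}: take the low-regularity endpoint to be something like $B^{-\kappa,\varepsilon}_{1,1}(\rho^{2+\sigma+2\iota})$ --- controlled, using the integrability $\rho^\iota\in L^{4,0}$, by the $L^\infty_T L^{2,\varepsilon}(\rho^{2+\sigma})$-bound on $\chi_{M,\varepsilon}$ that follows from Theorem~\ref{th:energy-estimate} --- keep the high-regularity endpoint $B^{1+3\kappa,\varepsilon}_{1,1}(\rho^4)$, and verify the compatibility constraint between the interpolation exponent and $\kappa,\sigma,\iota$. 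With this correction your argument closes.
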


We apply this result in order to
deduce tightness of the sequence $(\varphi_{M, \varepsilon})_{M, \varepsilon}$
as time-dependent stochastic processes. In other words, in contrast to
Theorem~\ref{thm:tight}, where we only proved tightness for a fixed time $t
\geqslant 0$, it is necessary to establish uniform time regularity of
$(\varphi_{M, \varepsilon})_{M, \varepsilon}$. To this end, we recall the
decompositions \[\varphi_{M, \varepsilon} = X_{M, \varepsilon} + Y_{M,
\varepsilon} + \phi_{M, \varepsilon} =
X_{M, \varepsilon} - \lambda X^{\!\resizebox{0.6em}{!}{
\begin{tikzpicture}
\pgfpathmoveto{\pgfqpoint{0cm}{-0.035cm}}
\pgfpathlineto{\pgfqpoint{1.376cm}{-0.035cm}}
\pgfpathlineto{\pgfqpoint{1.376cm}{1.552cm}}
\pgfpathlineto{\pgfqpoint{0cm}{1.552cm}}
\pgfpathclose
\pgfusepath{clip}
\begin{pgfscope}
\begin{pgfscope}
\pgfpathmoveto{\pgfqpoint{0cm}{-0.035cm}}
\pgfpathlineto{\pgfqpoint{1.376cm}{-0.035cm}}
\pgfpathlineto{\pgfqpoint{1.376cm}{1.552cm}}
\pgfpathlineto{\pgfqpoint{0cm}{1.552cm}}
\pgfpathclose
\pgfusepath{clip}
\begin{pgfscope}
\begin{pgfscope}
\pgfsetdash{}{0cm}
\pgfsetlinewidth{0.818mm}
\pgfsetroundcap
\pgfsetroundjoin
\pgfsetmiterlimit{7.0}
\definecolor{eps2pgf_color}{gray}{0}\pgfsetstrokecolor{eps2pgf_color}\pgfsetfillcolor{eps2pgf_color}
\pgfpathmoveto{\pgfqpoint{0.117cm}{1.421cm}}
\pgfpathlineto{\pgfqpoint{0.682cm}{0.671cm}}
\pgfpathlineto{\pgfqpoint{1.246cm}{1.421cm}}
\pgfusepath{stroke}
\end{pgfscope}
\definecolor{eps2pgf_color}{gray}{0}\pgfsetstrokecolor{eps2pgf_color}\pgfsetfillcolor{eps2pgf_color}
\pgfpathmoveto{\pgfqpoint{0.273cm}{1.395cm}}
\pgfpathcurveto{\pgfqpoint{0.273cm}{1.432cm}}{\pgfqpoint{0.259cm}{1.467cm}}{\pgfqpoint{0.233cm}{1.492cm}}
\pgfpathcurveto{\pgfqpoint{0.207cm}{1.518cm}}{\pgfqpoint{0.173cm}{1.532cm}}{\pgfqpoint{0.137cm}{1.532cm}}
\pgfpathcurveto{\pgfqpoint{0.1cm}{1.532cm}}{\pgfqpoint{0.066cm}{1.518cm}}{\pgfqpoint{0.04cm}{1.492cm}}
\pgfpathcurveto{\pgfqpoint{0.014cm}{1.467cm}}{\pgfqpoint{0cm}{1.432cm}}{\pgfqpoint{0cm}{1.395cm}}
\pgfpathcurveto{\pgfqpoint{0cm}{1.359cm}}{\pgfqpoint{0.014cm}{1.324cm}}{\pgfqpoint{0.04cm}{1.299cm}}
\pgfpathcurveto{\pgfqpoint{0.066cm}{1.273cm}}{\pgfqpoint{0.1cm}{1.258cm}}{\pgfqpoint{0.137cm}{1.258cm}}
\pgfpathcurveto{\pgfqpoint{0.173cm}{1.258cm}}{\pgfqpoint{0.207cm}{1.273cm}}{\pgfqpoint{0.233cm}{1.299cm}}
\pgfpathcurveto{\pgfqpoint{0.259cm}{1.324cm}}{\pgfqpoint{0.273cm}{1.359cm}}{\pgfqpoint{0.273cm}{1.395cm}}
\pgfusepath{fill}
\begin{pgfscope}
\pgfsetdash{}{0cm}
\pgfsetlinewidth{0.818mm}
\pgfsetmiterlimit{7.0}
\pgfpathmoveto{\pgfqpoint{0.682cm}{0.671cm}}
\pgfpathlineto{\pgfqpoint{0.679cm}{1.418cm}}
\pgfusepath{stroke}
\end{pgfscope}
\pgfpathmoveto{\pgfqpoint{0.815cm}{1.399cm}}
\pgfpathcurveto{\pgfqpoint{0.815cm}{1.435cm}}{\pgfqpoint{0.801cm}{1.47cm}}{\pgfqpoint{0.775cm}{1.496cm}}
\pgfpathcurveto{\pgfqpoint{0.75cm}{1.521cm}}{\pgfqpoint{0.715cm}{1.536cm}}{\pgfqpoint{0.679cm}{1.536cm}}
\pgfpathcurveto{\pgfqpoint{0.643cm}{1.536cm}}{\pgfqpoint{0.608cm}{1.521cm}}{\pgfqpoint{0.582cm}{1.496cm}}
\pgfpathcurveto{\pgfqpoint{0.557cm}{1.47cm}}{\pgfqpoint{0.542cm}{1.435cm}}{\pgfqpoint{0.542cm}{1.399cm}}
\pgfpathcurveto{\pgfqpoint{0.542cm}{1.363cm}}{\pgfqpoint{0.557cm}{1.328cm}}{\pgfqpoint{0.582cm}{1.302cm}}
\pgfpathcurveto{\pgfqpoint{0.608cm}{1.276cm}}{\pgfqpoint{0.643cm}{1.262cm}}{\pgfqpoint{0.679cm}{1.262cm}}
\pgfpathcurveto{\pgfqpoint{0.715cm}{1.262cm}}{\pgfqpoint{0.75cm}{1.276cm}}{\pgfqpoint{0.775cm}{1.302cm}}
\pgfpathcurveto{\pgfqpoint{0.801cm}{1.328cm}}{\pgfqpoint{0.815cm}{1.363cm}}{\pgfqpoint{0.815cm}{1.399cm}}
\pgfusepath{fill}
\pgfpathmoveto{\pgfqpoint{1.345cm}{1.371cm}}
\pgfpathcurveto{\pgfqpoint{1.345cm}{1.408cm}}{\pgfqpoint{1.331cm}{1.442cm}}{\pgfqpoint{1.305cm}{1.468cm}}
\pgfpathcurveto{\pgfqpoint{1.28cm}{1.494cm}}{\pgfqpoint{1.245cm}{1.508cm}}{\pgfqpoint{1.209cm}{1.508cm}}
\pgfpathcurveto{\pgfqpoint{1.172cm}{1.508cm}}{\pgfqpoint{1.138cm}{1.494cm}}{\pgfqpoint{1.112cm}{1.468cm}}
\pgfpathcurveto{\pgfqpoint{1.087cm}{1.442cm}}{\pgfqpoint{1.072cm}{1.408cm}}{\pgfqpoint{1.072cm}{1.371cm}}
\pgfpathcurveto{\pgfqpoint{1.072cm}{1.335cm}}{\pgfqpoint{1.087cm}{1.3cm}}{\pgfqpoint{1.112cm}{1.274cm}}
\pgfpathcurveto{\pgfqpoint{1.138cm}{1.249cm}}{\pgfqpoint{1.172cm}{1.234cm}}{\pgfqpoint{1.209cm}{1.234cm}}
\pgfpathcurveto{\pgfqpoint{1.245cm}{1.234cm}}{\pgfqpoint{1.28cm}{1.249cm}}{\pgfqpoint{1.305cm}{1.274cm}}
\pgfpathcurveto{\pgfqpoint{1.331cm}{1.3cm}}{\pgfqpoint{1.345cm}{1.335cm}}{\pgfqpoint{1.345cm}{1.371cm}}
\pgfusepath{fill}
\begin{pgfscope}
\pgfsetdash{}{0cm}
\pgfsetlinewidth{0.818mm}
\pgfsetroundcap
\pgfsetmiterlimit{4.0}
\pgfpathmoveto{\pgfqpoint{0.682cm}{0.671cm}}
\pgfpathlineto{\pgfqpoint{0.682cm}{0.042cm}}
\pgfusepath{stroke}
\end{pgfscope}
\end{pgfscope}
\end{pgfscope}
\end{pgfscope}
\end{tikzpicture}}}_{M, \varepsilon} + \zeta_{M, \varepsilon}\]
with
\begin{equation}
  \zeta_{M, \varepsilon} = Y_{M, \varepsilon} + \lambda X_{M, \varepsilon}^{\!\resizebox{0.6em}{!}{
\begin{tikzpicture}
\pgfpathmoveto{\pgfqpoint{0cm}{-0.035cm}}
\pgfpathlineto{\pgfqpoint{1.376cm}{-0.035cm}}
\pgfpathlineto{\pgfqpoint{1.376cm}{1.552cm}}
\pgfpathlineto{\pgfqpoint{0cm}{1.552cm}}
\pgfpathclose
\pgfusepath{clip}
\begin{pgfscope}
\begin{pgfscope}
\pgfpathmoveto{\pgfqpoint{0cm}{-0.035cm}}
\pgfpathlineto{\pgfqpoint{1.376cm}{-0.035cm}}
\pgfpathlineto{\pgfqpoint{1.376cm}{1.552cm}}
\pgfpathlineto{\pgfqpoint{0cm}{1.552cm}}
\pgfpathclose
\pgfusepath{clip}
\begin{pgfscope}
\begin{pgfscope}
\pgfsetdash{}{0cm}
\pgfsetlinewidth{0.818mm}
\pgfsetroundcap
\pgfsetroundjoin
\pgfsetmiterlimit{7.0}
\definecolor{eps2pgf_color}{gray}{0}\pgfsetstrokecolor{eps2pgf_color}\pgfsetfillcolor{eps2pgf_color}
\pgfpathmoveto{\pgfqpoint{0.117cm}{1.421cm}}
\pgfpathlineto{\pgfqpoint{0.682cm}{0.671cm}}
\pgfpathlineto{\pgfqpoint{1.246cm}{1.421cm}}
\pgfusepath{stroke}
\end{pgfscope}
\definecolor{eps2pgf_color}{gray}{0}\pgfsetstrokecolor{eps2pgf_color}\pgfsetfillcolor{eps2pgf_color}
\pgfpathmoveto{\pgfqpoint{0.273cm}{1.395cm}}
\pgfpathcurveto{\pgfqpoint{0.273cm}{1.432cm}}{\pgfqpoint{0.259cm}{1.467cm}}{\pgfqpoint{0.233cm}{1.492cm}}
\pgfpathcurveto{\pgfqpoint{0.207cm}{1.518cm}}{\pgfqpoint{0.173cm}{1.532cm}}{\pgfqpoint{0.137cm}{1.532cm}}
\pgfpathcurveto{\pgfqpoint{0.1cm}{1.532cm}}{\pgfqpoint{0.066cm}{1.518cm}}{\pgfqpoint{0.04cm}{1.492cm}}
\pgfpathcurveto{\pgfqpoint{0.014cm}{1.467cm}}{\pgfqpoint{0cm}{1.432cm}}{\pgfqpoint{0cm}{1.395cm}}
\pgfpathcurveto{\pgfqpoint{0cm}{1.359cm}}{\pgfqpoint{0.014cm}{1.324cm}}{\pgfqpoint{0.04cm}{1.299cm}}
\pgfpathcurveto{\pgfqpoint{0.066cm}{1.273cm}}{\pgfqpoint{0.1cm}{1.258cm}}{\pgfqpoint{0.137cm}{1.258cm}}
\pgfpathcurveto{\pgfqpoint{0.173cm}{1.258cm}}{\pgfqpoint{0.207cm}{1.273cm}}{\pgfqpoint{0.233cm}{1.299cm}}
\pgfpathcurveto{\pgfqpoint{0.259cm}{1.324cm}}{\pgfqpoint{0.273cm}{1.359cm}}{\pgfqpoint{0.273cm}{1.395cm}}
\pgfusepath{fill}
\begin{pgfscope}
\pgfsetdash{}{0cm}
\pgfsetlinewidth{0.818mm}
\pgfsetmiterlimit{7.0}
\pgfpathmoveto{\pgfqpoint{0.682cm}{0.671cm}}
\pgfpathlineto{\pgfqpoint{0.679cm}{1.418cm}}
\pgfusepath{stroke}
\end{pgfscope}
\pgfpathmoveto{\pgfqpoint{0.815cm}{1.399cm}}
\pgfpathcurveto{\pgfqpoint{0.815cm}{1.435cm}}{\pgfqpoint{0.801cm}{1.47cm}}{\pgfqpoint{0.775cm}{1.496cm}}
\pgfpathcurveto{\pgfqpoint{0.75cm}{1.521cm}}{\pgfqpoint{0.715cm}{1.536cm}}{\pgfqpoint{0.679cm}{1.536cm}}
\pgfpathcurveto{\pgfqpoint{0.643cm}{1.536cm}}{\pgfqpoint{0.608cm}{1.521cm}}{\pgfqpoint{0.582cm}{1.496cm}}
\pgfpathcurveto{\pgfqpoint{0.557cm}{1.47cm}}{\pgfqpoint{0.542cm}{1.435cm}}{\pgfqpoint{0.542cm}{1.399cm}}
\pgfpathcurveto{\pgfqpoint{0.542cm}{1.363cm}}{\pgfqpoint{0.557cm}{1.328cm}}{\pgfqpoint{0.582cm}{1.302cm}}
\pgfpathcurveto{\pgfqpoint{0.608cm}{1.276cm}}{\pgfqpoint{0.643cm}{1.262cm}}{\pgfqpoint{0.679cm}{1.262cm}}
\pgfpathcurveto{\pgfqpoint{0.715cm}{1.262cm}}{\pgfqpoint{0.75cm}{1.276cm}}{\pgfqpoint{0.775cm}{1.302cm}}
\pgfpathcurveto{\pgfqpoint{0.801cm}{1.328cm}}{\pgfqpoint{0.815cm}{1.363cm}}{\pgfqpoint{0.815cm}{1.399cm}}
\pgfusepath{fill}
\pgfpathmoveto{\pgfqpoint{1.345cm}{1.371cm}}
\pgfpathcurveto{\pgfqpoint{1.345cm}{1.408cm}}{\pgfqpoint{1.331cm}{1.442cm}}{\pgfqpoint{1.305cm}{1.468cm}}
\pgfpathcurveto{\pgfqpoint{1.28cm}{1.494cm}}{\pgfqpoint{1.245cm}{1.508cm}}{\pgfqpoint{1.209cm}{1.508cm}}
\pgfpathcurveto{\pgfqpoint{1.172cm}{1.508cm}}{\pgfqpoint{1.138cm}{1.494cm}}{\pgfqpoint{1.112cm}{1.468cm}}
\pgfpathcurveto{\pgfqpoint{1.087cm}{1.442cm}}{\pgfqpoint{1.072cm}{1.408cm}}{\pgfqpoint{1.072cm}{1.371cm}}
\pgfpathcurveto{\pgfqpoint{1.072cm}{1.335cm}}{\pgfqpoint{1.087cm}{1.3cm}}{\pgfqpoint{1.112cm}{1.274cm}}
\pgfpathcurveto{\pgfqpoint{1.138cm}{1.249cm}}{\pgfqpoint{1.172cm}{1.234cm}}{\pgfqpoint{1.209cm}{1.234cm}}
\pgfpathcurveto{\pgfqpoint{1.245cm}{1.234cm}}{\pgfqpoint{1.28cm}{1.249cm}}{\pgfqpoint{1.305cm}{1.274cm}}
\pgfpathcurveto{\pgfqpoint{1.331cm}{1.3cm}}{\pgfqpoint{1.345cm}{1.335cm}}{\pgfqpoint{1.345cm}{1.371cm}}
\pgfusepath{fill}
\begin{pgfscope}
\pgfsetdash{}{0cm}
\pgfsetlinewidth{0.818mm}
\pgfsetroundcap
\pgfsetmiterlimit{4.0}
\pgfpathmoveto{\pgfqpoint{0.682cm}{0.671cm}}
\pgfpathlineto{\pgfqpoint{0.682cm}{0.042cm}}
\pgfusepath{stroke}
\end{pgfscope}
\end{pgfscope}
\end{pgfscope}
\end{pgfscope}
\end{tikzpicture}}}
  + \phi_{M, \varepsilon} = - \LL_{\varepsilon}^{- 1} [3\lambda
  (\UU_{>}^{\varepsilon} \llbracket X^2_{M, \varepsilon} \rrbracket \succ
  Y_{M, \varepsilon}] + \phi_{M, \varepsilon}. \label{eq:24z}
\end{equation}

\begin{theorem}
  \label{thm:phitight}Let $\beta \in (0, 1 / 4)$. Then  for
  all $p \in [1, \infty)$ and $\tau \in (0, T)$
  \[ \sup_{\varepsilon \in \mathcal{A}, M > 0} \mathbb{E} \| \varphi_{M,
     \varepsilon} \|^{2 p}_{W^{\beta, 1}_T B_{1, 1}^{- 1 - 3 \kappa,\varepsilon} (\rho^{4
     + \sigma})} + \sup_{\varepsilon \in \mathcal{A}, M > 0} \mathbb{E} \|
     \varphi_{M, \varepsilon} \|^{2 p}_{L^{\infty}_{\tau, T} H^{- 1 / 2 -2
     \kappa,\varepsilon} (\rho^2)} \leqslant C_\lambda < \infty, \]
  where $L^{\infty}_{\tau, T} H^{- 1 / 2 - 2\kappa,\varepsilon} (\rho^2) = L^{\infty}
  (\tau, T ; H^{- 1 / 2 - 2\kappa,\varepsilon} (\rho^2))$.
\end{theorem}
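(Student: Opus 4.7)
The plan is to work with the decomposition $\varphi_{M,\varepsilon} = X_{M,\varepsilon} + Y_{M,\varepsilon} + \phi_{M,\varepsilon}$ and treat each summand separately, since each lives in a different function space but all embed into the target spaces. The new ingredient will be Proposition~\ref{prop:reg}, which provides uniform control of $\chi_{M,\varepsilon} = \phi_{M,\varepsilon} + 3\lambda\,\ttwoone{X_{M,\varepsilon}}\succ\phi_{M,\varepsilon}$ in $L^1_T B^{1+3\kappa,\varepsilon}_{1,1}(\rho^4)$ by a polynomial in the stochastic data and $\|\rho^2\phi_{M,\varepsilon}(0)\|_{L^{2,\varepsilon}}$, the latter moments being already controlled by Theorem~\ref{thm:main}.

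For the first bound I would handle $X_{M,\varepsilon}$ by standard Kolmogorov / hypercontractivity arguments on its Gaussian stochastic convolution, yielding uniformly in $(M,\varepsilon)$ that $X_{M,\varepsilon}\in C^\alpha_T \CC^{-1/2-\kappa-2\alpha,\varepsilon}(\rho^\sigma)$ for any $\alpha\in(0,1/4)$; for $\beta<\alpha$ this embeds into $W^{\beta,1}_T B^{-1-3\kappa,\varepsilon}_{1,1}(\rho^{4+\sigma})$. The process $Y_{M,\varepsilon}$ is already provided by Lemma~\ref{lem:Y1} in $C^{\beta/2}_T L^{\infty,\varepsilon}(\rho^\sigma)$, which again embeds into the target space. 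For $\phi_{M,\varepsilon}$ I would invert the paracontrolled identity to write $\phi_{M,\varepsilon} = \chi_{M,\varepsilon} - 3\lambda\,\ttwoone{X_{M,\varepsilon}}\succ\phi_{M,\varepsilon}$; Proposition~\ref{prop:reg} takes care of $\chi_{M,\varepsilon}$, while the paraproduct contribution is controlled by an increment estimate combining the classical H\"older-in-time regularity of $\ttwoone{X_{M,\varepsilon}}$ with the pointwise-in-time moments of $\phi_{M,\varepsilon}$ from Theorem~\ref{thm:main}.

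For the second bound I plan to use the pathwise energy inequality~\eqref{eq:d18} integrated on $[\tau/2,t]$ for $t\in[\tau,T]$:
\begin{equation*}
\sup_{t\in[\tau,T]}\|\rho^2\phi_{M,\varepsilon}(t)\|_{L^{2,\varepsilon}}^2 \leqslant \|\rho^2\phi_{M,\varepsilon}(\tau/2)\|_{L^{2,\varepsilon}}^2 + C_{T,\lambda}\int_{\tau/2}^{T}(1+|\log s|^{4/(2+\theta)})Q_\rho(\mathbb{X}_{M,\varepsilon})\,\mathrm{d}s,
\end{equation*}
then take expectations and invoke stationarity together with the single-time moment bound from Theorem~\ref{thm:main} at $\tau/2$ to close the estimate; the higher moments come analogously from~\eqref{eq:lp} in Corollary~\ref{cor:Lp}. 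The remaining contributions $X_{M,\varepsilon}\in C_T\CC^{-1/2-\kappa,\varepsilon}(\rho^\sigma)$ and $Y_{M,\varepsilon}\in C_T L^{\infty,\varepsilon}(\rho^\sigma)$ both embed into $C_T H^{-1/2-2\kappa,\varepsilon}(\rho^2)$, giving the full $L^\infty_{\tau,T}H^{-1/2-2\kappa,\varepsilon}(\rho^2)$ bound on $\varphi_{M,\varepsilon}$. The principal obstacle throughout is the time regularity of $\phi_{M,\varepsilon}$: the singular paraproduct $\llbracket X^2_{M,\varepsilon}\rrbracket\succ\phi_{M,\varepsilon}$ in equation~\eqref{eq:phi-eq} has the same regularity as $\phi$ itself and cannot be absorbed by naive Schauder theory, which is exactly why the $\chi$-ansatz of~\eqref{eq:chi1} is introduced---it resums this paraproduct into a genuine gain of spatial regularity that can then be traded for time regularity.
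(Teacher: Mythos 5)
Your decomposition and your treatment of the second ($L^\infty_{\tau,T}$) bound are essentially the paper's argument. The genuine gap is in the first bound, in the treatment of $\phi_{M,\varepsilon}$. Writing $\phi = \chi - 3\lambda\,\ttwoone{X}\succ\phi$ and estimating the time increment of the paraproduct is circular: the increment splits as
\[
(\ttwoone{X}(t)-\ttwoone{X}(s))\succ\phi(t) \;+\; \ttwoone{X}(s)\succ(\phi(t)-\phi(s)),
\]
and the second term requires exactly the $W^{\beta,1}_T$ regularity of $\phi$ that you are trying to establish; pointwise-in-time moment bounds on $\phi$ cannot make this increment small. Moreover, Proposition~\ref{prop:reg} as stated gives only the spatial bound $\chi \in L^1_T B^{1+3\kappa,\varepsilon}_{1,1}(\rho^4)$, which says nothing about Sobolev time regularity of $\chi$ on its own.

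The paper circumvents both issues by estimating the time derivative directly from the equations rather than from the ansatz relation. The computations behind Proposition~\ref{prop:reg} also control $\|\LL_\varepsilon\chi_{M,\varepsilon}\|_{L^1_T B^{-1+3\kappa,\varepsilon}_{1,1}(\rho^4)}$, and since $\partial_t\chi = (\Delta_\varepsilon-m^2)\chi + \LL_\varepsilon\chi$ one obtains $\chi\in W^{1,1}_T B^{-1+3\kappa,\varepsilon}_{1,1}(\rho^4)$. For $\phi$ one uses \eqref{eq:phiU} directly: $\LL_\varepsilon\phi_{M,\varepsilon} = -3\lambda\llbracket X^2_{M,\varepsilon}\rrbracket\succ\phi_{M,\varepsilon} + U_{M,\varepsilon}$, where the problematic resonant product $\llbracket X^2_{M,\varepsilon}\rrbracket\circ\phi_{M,\varepsilon}$ hidden inside $U_{M,\varepsilon}$ has been resummed via the $\chi$-ansatz into $\llbracket X^2_{M,\varepsilon}\rrbracket\circ\chi_{M,\varepsilon}$ plus explicit commutators and stochastic data. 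This gives $\phi\in W^{1,1}_T B^{-1-3\kappa,\varepsilon}_{1,1}(\rho^{4+\sigma})$ — a full weak time derivative in $L^1_T$ — with no bootstrap. A further point: Lemma~\ref{lem:Y1} only gives $Y_{M,\varepsilon}\in C^{\beta'/2}_T L^{\infty,\varepsilon}(\rho^\sigma)$ for a fixed small $\beta'\geqslant3\kappa$, which does not embed into $W^{\beta,1}_T$ for $\beta$ close to $1/4$; this is why the paper passes through $\zeta_{M,\varepsilon} = -\LL_\varepsilon^{-1}[3\lambda(\UU^{\varepsilon}_{>}\llbracket X^2_{M,\varepsilon}\rrbracket)\succ Y_{M,\varepsilon}]+\phi_{M,\varepsilon}$ and uses the Schauder estimate to place the first piece in $C^{(1-\kappa)/2}_T L^{\infty,\varepsilon}(\rho^\sigma)$ before adding the $C^\beta_T$ regularity of $X_{M,\varepsilon}$ and $\tthreeone{X}_{M,\varepsilon}$.
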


\begin{proof}
  Let us begin with the first bound. According to Proposition~\ref{prop:reg} and
  Theorem~\ref{thm:tight} we obtain that
  \[ \mathbb{E} \| \chi_{M, \varepsilon} \|^{2 p}_{L^1_T B_{1, 1}^{1 + 3
     \kappa, \varepsilon} (\rho^4)} \leqslant C_{T,\lambda} \mathbb{E}Q_{\rho}
     (\mathbb{X}_{M, \varepsilon}) (1 +\mathbb{E}\| \rho^2 \phi_{M,
     \varepsilon} (0)\|^{2 p}_{L^{2, \varepsilon}}) \]
  \[ \leqslant C_{T,\lambda} \mathbb{E}Q_{\rho} (\mathbb{X}_{\varepsilon}) (1
     +\mathbb{E}\| \rho^2 (\varphi_{M, \varepsilon} (0) - X_{M, \varepsilon}
     (0)) \|^{2 p}_{L^{2, \varepsilon}} +\mathbb{E}\| \rho^2 Y_{M,
     \varepsilon} (0)\|^{2 p}_{L^{2, \varepsilon}}) \]
  is bounded uniformly in $M, \varepsilon$. In addition, the computations in
  the proof of Proposition~\ref{prop:reg} imply that also $\mathbb{E} \left\|
  \LL_{\varepsilon} \chi_{M, \varepsilon} \right\|^{2 p}_{L_T^1 B_{1, 1}^{- 1
  + 3 \kappa, \varepsilon} (\rho^4)}$ is bounded uniformly in $M,
  \varepsilon$. As a consequence, we deduce that
  \[ \mathbb{E} \| \partial_t \chi_{M, \varepsilon} \|^{2 p}_{L^1_T B^{- 1 + 3
     \kappa, \varepsilon}_{1, 1} (\rho^4)} \leqslant \mathbb{E} \|
     (\Delta_{\varepsilon} - m^2) \chi_{M, \varepsilon} \|^{2 p}_{L^1_T B^{- 1
     + 3 \kappa, \varepsilon}_{1, 1} (\rho^4)} +\mathbb{E} \left\|
     \LL_{\varepsilon} \chi_{M, \varepsilon} \right\|^{2 p}_{L^1_T B^{- 1 + 3
     \kappa, \varepsilon}_{1, 1} (\rho^4)} \]
  is also bounded uniformly in $M, \varepsilon$.
  
  Next, we apply a similar approach to derive uniform time regularity of
  $\phi_{M, \varepsilon}$. To this end, we study the right hand side of
  {\eqref{eq:phiU}}. Observe that due to the energy estimate from Theorem~\ref{th:energy-estimate} and the bound from Proposition~\ref{prop:reg} together
  with Theorem~\ref{thm:tight} the following are bounded uniformly in $M,
  \varepsilon$
  \[ \mathbb{E} \| \llbracket X_{M, \varepsilon}^2 \rrbracket \succ \phi_{M,
     \varepsilon} \|^{2 p}_{L^2_T H^{- 1 - \kappa, \varepsilon} (\rho^{2 +
     \sigma})}, \hspace{1em} \mathbb{E} \| \llbracket X_{M, \varepsilon}^2
     \rrbracket \circ \chi_{M, \varepsilon} \|^{2 p}_{L^1_T B^{2 \kappa,\varepsilon}_{1,
     1} (\rho^{4 + \sigma})}, \]
  whereas all the other terms on the right hand side of {\eqref{eq:phiU}} are
  uniformly bounded in better function spaces. Hence we deduce that
  \[ \mathbb{E} \| \partial_t \phi_{M, \varepsilon} \|^{2 p}_{L_T^1 B^{- 1 - 3
     \kappa, \varepsilon}_{1, 1} (\rho^{4 + \sigma})} \leqslant \mathbb{E} \|
     (\Delta_{\varepsilon} - m^2) \phi_{M, \varepsilon} \|^{2 p}_{L^1_T B^{- 1
     - 3 \kappa, \varepsilon}_{1, 1} (\rho^{4 + \sigma})} +\mathbb{E} \left\|
     \LL_{\varepsilon} \phi_{M, \varepsilon} \right\|^{2 p}_{L^1_T B^{- 1 - 3
     \kappa, \varepsilon}_{1, 1} (\rho^{4 + \sigma})} \]
  is bounded uniformly in $M, \varepsilon$.
  
  Now we have all in hand to derive a uniform time regularity of $\zeta_{M,
  \varepsilon}$. Using Schauder estimates together with {\eqref{eq:24z}} it
  holds that
  \[ \mathbb{E} \| \zeta_{M, \varepsilon} \|^{2 p}_{W^{(1 - 2 \kappa) / 2,
     1}_T B_{1, 1}^{- 1 - 3 \kappa, \varepsilon} (\rho^{4 + \sigma})}
     \leqslant \mathbb{E} \left\| \LL_{\varepsilon}^{- 1} [3\lambda
     (\UU_{>}^{\varepsilon} \llbracket X^2_{M, \varepsilon} \rrbracket \succ
     Y_{M, \varepsilon}] \right\|^{2 p}_{C^{(1 - \kappa) / 2}_T L^{\infty,\varepsilon}
     (\rho^{\sigma})} \]
  \[ +\mathbb{E} \| \phi_{M, \varepsilon} \|^{2 p}_{W_T^{1, 1} B^{- 1 - 3
     \kappa, \varepsilon}_{1, 1} (\rho^{4 + \sigma})} \]
  is bounded uniformly in $M, \varepsilon$.
  
  Finally, since for all $\beta \in (0, 1)$ we have that both
  \[ \mathbb{E} \| X_{M, \varepsilon} \|^{2 p}_{C^{\beta}_T \CC^{- 1 / 2 -
     \kappa - 2 \beta,\varepsilon} (\rho^{\sigma})}, \hspace{1em} \mathbb{E} \|
     X^{\!\resizebox{0.6em}{!}{
\begin{tikzpicture}
\pgfpathmoveto{\pgfqpoint{0cm}{-0.035cm}}
\pgfpathlineto{\pgfqpoint{1.376cm}{-0.035cm}}
\pgfpathlineto{\pgfqpoint{1.376cm}{1.552cm}}
\pgfpathlineto{\pgfqpoint{0cm}{1.552cm}}
\pgfpathclose
\pgfusepath{clip}
\begin{pgfscope}
\begin{pgfscope}
\pgfpathmoveto{\pgfqpoint{0cm}{-0.035cm}}
\pgfpathlineto{\pgfqpoint{1.376cm}{-0.035cm}}
\pgfpathlineto{\pgfqpoint{1.376cm}{1.552cm}}
\pgfpathlineto{\pgfqpoint{0cm}{1.552cm}}
\pgfpathclose
\pgfusepath{clip}
\begin{pgfscope}
\begin{pgfscope}
\pgfsetdash{}{0cm}
\pgfsetlinewidth{0.818mm}
\pgfsetroundcap
\pgfsetroundjoin
\pgfsetmiterlimit{7.0}
\definecolor{eps2pgf_color}{gray}{0}\pgfsetstrokecolor{eps2pgf_color}\pgfsetfillcolor{eps2pgf_color}
\pgfpathmoveto{\pgfqpoint{0.117cm}{1.421cm}}
\pgfpathlineto{\pgfqpoint{0.682cm}{0.671cm}}
\pgfpathlineto{\pgfqpoint{1.246cm}{1.421cm}}
\pgfusepath{stroke}
\end{pgfscope}
\definecolor{eps2pgf_color}{gray}{0}\pgfsetstrokecolor{eps2pgf_color}\pgfsetfillcolor{eps2pgf_color}
\pgfpathmoveto{\pgfqpoint{0.273cm}{1.395cm}}
\pgfpathcurveto{\pgfqpoint{0.273cm}{1.432cm}}{\pgfqpoint{0.259cm}{1.467cm}}{\pgfqpoint{0.233cm}{1.492cm}}
\pgfpathcurveto{\pgfqpoint{0.207cm}{1.518cm}}{\pgfqpoint{0.173cm}{1.532cm}}{\pgfqpoint{0.137cm}{1.532cm}}
\pgfpathcurveto{\pgfqpoint{0.1cm}{1.532cm}}{\pgfqpoint{0.066cm}{1.518cm}}{\pgfqpoint{0.04cm}{1.492cm}}
\pgfpathcurveto{\pgfqpoint{0.014cm}{1.467cm}}{\pgfqpoint{0cm}{1.432cm}}{\pgfqpoint{0cm}{1.395cm}}
\pgfpathcurveto{\pgfqpoint{0cm}{1.359cm}}{\pgfqpoint{0.014cm}{1.324cm}}{\pgfqpoint{0.04cm}{1.299cm}}
\pgfpathcurveto{\pgfqpoint{0.066cm}{1.273cm}}{\pgfqpoint{0.1cm}{1.258cm}}{\pgfqpoint{0.137cm}{1.258cm}}
\pgfpathcurveto{\pgfqpoint{0.173cm}{1.258cm}}{\pgfqpoint{0.207cm}{1.273cm}}{\pgfqpoint{0.233cm}{1.299cm}}
\pgfpathcurveto{\pgfqpoint{0.259cm}{1.324cm}}{\pgfqpoint{0.273cm}{1.359cm}}{\pgfqpoint{0.273cm}{1.395cm}}
\pgfusepath{fill}
\begin{pgfscope}
\pgfsetdash{}{0cm}
\pgfsetlinewidth{0.818mm}
\pgfsetmiterlimit{7.0}
\pgfpathmoveto{\pgfqpoint{0.682cm}{0.671cm}}
\pgfpathlineto{\pgfqpoint{0.679cm}{1.418cm}}
\pgfusepath{stroke}
\end{pgfscope}
\pgfpathmoveto{\pgfqpoint{0.815cm}{1.399cm}}
\pgfpathcurveto{\pgfqpoint{0.815cm}{1.435cm}}{\pgfqpoint{0.801cm}{1.47cm}}{\pgfqpoint{0.775cm}{1.496cm}}
\pgfpathcurveto{\pgfqpoint{0.75cm}{1.521cm}}{\pgfqpoint{0.715cm}{1.536cm}}{\pgfqpoint{0.679cm}{1.536cm}}
\pgfpathcurveto{\pgfqpoint{0.643cm}{1.536cm}}{\pgfqpoint{0.608cm}{1.521cm}}{\pgfqpoint{0.582cm}{1.496cm}}
\pgfpathcurveto{\pgfqpoint{0.557cm}{1.47cm}}{\pgfqpoint{0.542cm}{1.435cm}}{\pgfqpoint{0.542cm}{1.399cm}}
\pgfpathcurveto{\pgfqpoint{0.542cm}{1.363cm}}{\pgfqpoint{0.557cm}{1.328cm}}{\pgfqpoint{0.582cm}{1.302cm}}
\pgfpathcurveto{\pgfqpoint{0.608cm}{1.276cm}}{\pgfqpoint{0.643cm}{1.262cm}}{\pgfqpoint{0.679cm}{1.262cm}}
\pgfpathcurveto{\pgfqpoint{0.715cm}{1.262cm}}{\pgfqpoint{0.75cm}{1.276cm}}{\pgfqpoint{0.775cm}{1.302cm}}
\pgfpathcurveto{\pgfqpoint{0.801cm}{1.328cm}}{\pgfqpoint{0.815cm}{1.363cm}}{\pgfqpoint{0.815cm}{1.399cm}}
\pgfusepath{fill}
\pgfpathmoveto{\pgfqpoint{1.345cm}{1.371cm}}
\pgfpathcurveto{\pgfqpoint{1.345cm}{1.408cm}}{\pgfqpoint{1.331cm}{1.442cm}}{\pgfqpoint{1.305cm}{1.468cm}}
\pgfpathcurveto{\pgfqpoint{1.28cm}{1.494cm}}{\pgfqpoint{1.245cm}{1.508cm}}{\pgfqpoint{1.209cm}{1.508cm}}
\pgfpathcurveto{\pgfqpoint{1.172cm}{1.508cm}}{\pgfqpoint{1.138cm}{1.494cm}}{\pgfqpoint{1.112cm}{1.468cm}}
\pgfpathcurveto{\pgfqpoint{1.087cm}{1.442cm}}{\pgfqpoint{1.072cm}{1.408cm}}{\pgfqpoint{1.072cm}{1.371cm}}
\pgfpathcurveto{\pgfqpoint{1.072cm}{1.335cm}}{\pgfqpoint{1.087cm}{1.3cm}}{\pgfqpoint{1.112cm}{1.274cm}}
\pgfpathcurveto{\pgfqpoint{1.138cm}{1.249cm}}{\pgfqpoint{1.172cm}{1.234cm}}{\pgfqpoint{1.209cm}{1.234cm}}
\pgfpathcurveto{\pgfqpoint{1.245cm}{1.234cm}}{\pgfqpoint{1.28cm}{1.249cm}}{\pgfqpoint{1.305cm}{1.274cm}}
\pgfpathcurveto{\pgfqpoint{1.331cm}{1.3cm}}{\pgfqpoint{1.345cm}{1.335cm}}{\pgfqpoint{1.345cm}{1.371cm}}
\pgfusepath{fill}
\begin{pgfscope}
\pgfsetdash{}{0cm}
\pgfsetlinewidth{0.818mm}
\pgfsetroundcap
\pgfsetmiterlimit{4.0}
\pgfpathmoveto{\pgfqpoint{0.682cm}{0.671cm}}
\pgfpathlineto{\pgfqpoint{0.682cm}{0.042cm}}
\pgfusepath{stroke}
\end{pgfscope}
\end{pgfscope}
\end{pgfscope}
\end{pgfscope}
\end{tikzpicture}}}_{M, \varepsilon} \|^{2 p}_{C^{\beta}_T \CC^{1 / 2 -
     \kappa - 2 \beta,\varepsilon} (\rho^{\sigma})} \]
  are bounded uniformly in $M, \varepsilon$, we conclude that so is
  $\mathbb{E} \| \varphi_{M, \varepsilon} \|^{2 p}_{W^{\beta, 1}_T B_{1, 1}^{-
  1 - 3 \kappa,\varepsilon} (\rho^{4 + \sigma})}$ for $\beta \in (0, 1 / 4)$, which
  completes the proof of the first bound.
  
  In order to establish the second bound we recall the decomposition
  $\varphi_{M, \varepsilon} = X_{M, \varepsilon} + Y_{M, \varepsilon} +
  \phi_{M, \varepsilon}$ and make use of the energy estimate from Corollary~\ref{cor:Lp}. Taking supremum over $t \in [\tau, T]$ and expectation implies
  \[ \sup_{\varepsilon \in \mathcal{A}, M > 0} \mathbb{E} \| \phi_{M,
     \varepsilon} \|^{2 p}_{L^{\infty}_{\tau, T} L^{2,\varepsilon} (\rho^2)} < \infty . \]
  The claim now follows using the bound for $X_{M, \varepsilon}$ together with
  the bound for $Y_{M, \varepsilon}$ in Lemma~\ref{lem:Y1}.
\end{proof}

Even though the uniform bound in the previous result is far from being
  optimal, it is sufficient for our purposes below.

\begin{corollary}
  \label{cor:t}Let $\rho$ be a weight such that $\rho^{\iota} \in L^4$ for
  some $\iota \in (0, 1)$. Let $\beta \in (0, 1 / 4)$ and $\alpha \in (0,
  \beta)$. Then the family of joint laws of $(\mathcal{E}^{\varepsilon}
  \varphi_{M, \varepsilon}, \mathcal{E}^{\varepsilon} \mathbb{X}_{M,
  \varepsilon})$ is tight on $W^{\alpha, 1}_{\tmop{loc}} B_{1, 1}^{- 1 - 4
  \kappa} (\rho^{4 + \sigma}) \times C^{\kappa / 2}_{\tmop{loc}}
  \mathcal{X}^{}$, where
  \[ \mathcal{X} \assign \prod_{i = 1, \ldots, 7} \CC^{\alpha (i) - \kappa}
     (\rho^{\sigma}) \]
  with $\alpha (1) = \alpha (7) = - 1 / 2,$ $\alpha (2) = - 1,$ $\alpha (3) =
  1 / 2,$ $\alpha (4) = \alpha (5) = \alpha (6) = 0$.
\end{corollary}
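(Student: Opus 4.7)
The proof will combine the uniform bounds of Theorem \ref{thm:phitight} with standard H\"older-in-time estimates for the stochastic objects $\mathbb{X}_{M,\varepsilon}$, the continuum extension property of $\mathcal{E}^{\varepsilon}$ given by Lemma \ref{lem:ext}, and compact embeddings of weighted Besov spaces on $\mathbb{R}^3$. The two marginals will be treated separately; tightness of the joint law on the product space then follows automatically from tightness of each marginal.

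For the first marginal, Theorem \ref{thm:phitight} provides uniform-in-$M,\varepsilon$ moment bounds for $\varphi_{M,\varepsilon}$ in $W^{\beta,1}_T B_{1,1}^{-1-3\kappa,\varepsilon}(\rho^{4+\sigma/2})$ for every $\beta\in(0,1/4)$ and every $T>0$. Applying $\mathcal{E}^{\varepsilon}$ via Lemma \ref{lem:ext} preserves these bounds at the cost of an arbitrarily small loss in the weight exponent, so the same estimates hold for $\mathcal{E}^{\varepsilon}\varphi_{M,\varepsilon}$ on the continuum side. The desired tightness then reduces to the compactness of the embedding
\[ W^{\beta,1}_T B_{1,1}^{-1-3\kappa}(\rho^{4+\sigma/2}) \hookrightarrow W^{\alpha,1}_T B_{1,1}^{-1-4\kappa}(\rho^{4+\sigma}), \]
valid whenever $\alpha<\beta$. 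This is an Aubin--Lions-type argument: the strict loss in the spatial Besov regularity, combined with the strict increase in the polynomial decay at infinity, gives compactness in the target Besov space on $\mathbb{R}^3$, while the strict loss in the fractional Sobolev exponent in time provides temporal compactness. Tightness in $W^{\alpha,1}_{\tmop{loc}}$ follows by a diagonal argument along an exhaustion $[0,T_n]$ with $T_n\uparrow\infty$.

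For the second marginal, standard calculations as in Section~3 of \cite{GH18} yield, in addition to the uniform spatial bounds \eqref{eq:XX1}--\eqref{eq:XX3} defining $\|\mathbb{X}_{M,\varepsilon}\|$, uniform moment estimates in $C^{\gamma/2}_T \CC^{\alpha(i)-\kappa/2,\varepsilon}(\rho^{\sigma/2})$ for every component and some $\gamma>\kappa$; these follow from Kolmogorov's continuity criterion together with the Gaussian (or Wick-polynomial) structure of the $\mathbb{X}_{M,\varepsilon}$ and the smoothing effect of the discrete heat semigroup, exactly as indicated in the remark following Lemma \ref{lem:Y1}. Lemma \ref{lem:ext} transports these bounds to the continuum with a harmless weight loss, and a weighted Arzel\`a--Ascoli argument yields compactness of the embedding
\[ C^{\gamma/2}_T \CC^{\alpha(i)-\kappa/2}(\rho^{\sigma/2}) \hookrightarrow C^{\kappa/2}_T \CC^{\alpha(i)-\kappa}(\rho^{\sigma}), \]
thanks to the strict losses in H\"older exponent, Besov regularity, and polynomial weight; an exhaustion in $T$ again produces the $C^{\kappa/2}_{\tmop{loc}}$ target.

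The main technical point is the construction and verification of the compact embedding in the first step, which has to combine Aubin--Lions-type compactness in time on a fractional Sobolev $W^{\beta,1}_T$ scale with weighted Besov compactness on $\mathbb{R}^3$; the other ingredients are either already available or follow from classical arguments. Once this is in place, no new energy or stochastic estimates are needed beyond those collected in Sections \ref{sec:tight} and \ref{s:reg}, and the tightness of the joint law is immediate from the tightness of its marginals.
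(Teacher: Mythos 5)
Your proposal is correct and follows essentially the same route as the paper's own (terse) proof: uniform bounds from Theorem~\ref{thm:phitight}, transport to the continuum by Lemma~\ref{lem:ext}, a compact embedding of weighted Besov spaces (the paper cites Theorem~6.31 in~\cite{T06}) upgraded to a compact space-time embedding by an Aubin--Lions argument (the paper cites Theorem~5.1 in~\cite{A00}), plus the standard Kolmogorov-type estimates for the enhanced noise. Your extra remarks---that joint tightness follows from marginal tightness, and the explicit weighted Arzel\`a--Ascoli step for $\mathbb{X}_{M,\varepsilon}$---only flesh out what the paper compresses into ``the usual arguments.''
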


\begin{proof}
  According to Theorem 6.31 in {\cite{T06}} we have the compact embedding
  \[ B_{1, 1}^{- 1 - 3 \kappa} (\rho^{4 + \sigma}) \subset B_{1, 1}^{- 1 - 4
     \kappa} (\rho^{4 + 2 \sigma}) \]
  and consequently since $\alpha < \beta$ the embedding
  \[ W^{\beta, 1}_{\tmop{loc}} B_{1, 1}^{- 1 - 3 \kappa} (\rho^{4 + \sigma})
     \subset W^{\alpha, 1}_{\tmop{loc}} B_{1, 1}^{- 1 - 4 \kappa} (\rho^{4 + 2
     \sigma}) \]
  is compact, see e.g. Theorem 5.1 {\cite{A00}}. Hence the desired tightness
  of $\mathcal{E}^{\varepsilon} \varphi_{M, \varepsilon}$ follows from Theorem~\ref{thm:phitight} and Lemma~\ref{lem:ext}. The tightness of
  $\mathcal{E}^{\varepsilon} \mathbb{X}_{M, \varepsilon}$ follows from the
  usual arguments and does not pose any problems.
\end{proof}

As a consequence, we may extract a converging subsequence of the joint laws of
the  processes $(\mathcal{E}^{\varepsilon} \varphi_{M, \varepsilon},
\mathcal{E}^{\varepsilon} \mathbb{X}_{M, \varepsilon})_{M, \varepsilon}$ in
$W^{\alpha, 1}_{\tmop{loc}} B_{1, 1}^{- 1 - 4 \kappa} (\rho^{4 + \sigma})
\times C^{\kappa / 2}_{\tmop{loc}} \mathcal{X}^{}$. Let $\hat{\mu}$ denote any
limit point. We recall that $\mathbb{X}_{M,\varepsilon}$ denotes the collection of all the necessary stochastic objects, see \eqref{eq:XX}. We denote by $(\varphi, \mathbb{X})$ the canonical process on
$W^{\alpha, 1}_{\tmop{loc}} B_{1, 1}^{- 1 - 4 \kappa} (\rho^{4 + \sigma})
\times C^{\kappa / 2}_{\tmop{loc}} \mathcal{X}^{}$  and let $\mu$ be the law
of the pair $(\varphi, X)$ under $\hat{\mu}$ (i.e. the projection of $\hat{\mu}$
to the first two components). Observe that there exists a measurable map $\Psi :
(\varphi, X) \mapsto (\varphi, \mathbb{X})$ such that $\hat{\mu} = \mu \circ
\Psi^{- 1}$. Therefore we can represent expectations under $\hat{\mu}$ as
expectations under $\mu$ with the understanding that the elements of
$\mathbb{X}$ are constructed canonically from $X$ via $\Psi$. Furthermore, $Y,\phi,\zeta,\chi$ are defined analogously as on the approximate level as measurable functions of the pair $(\varphi,X)$. In particular, the limit localizer $\UU_{>}$ is determined by the constant $L_{0}$ obtained in Lemma \ref{lem:Y1}. Consequently, all the above uniform estimates are preserved for the limiting measure and the convergence of the corresponding lattice approximations to $Y,\phi,\zeta,\chi$ follows. In addition, the limiting process $\varphi$ is stationary in the following distributional sense: for all $f\in C^{\infty}_{c}(\mathbb{R}_{+})$ and all $\tau>0$, the laws of
$$
\varphi(f)\quad\text{and}\quad \varphi(f(\cdot-\tau))\quad\text{on}\quad \mathcal{S}'(\mathbb{R}^{3})
$$
coincide. Based on the time regularity of $\varphi$ it can be shown that this implies that the laws of $\varphi(t)$ and $\varphi(t+\tau)$ coincide for all $\tau>0$ and a.e. $t\in[0,\infty)$. The projection of $\mu$ on $\varphi(t)$ taken from this set of full measure is the measure $\nu$ as obtained in Theorem~\ref{thm:main}.

\subsection{Integration by parts formula}

The goal of his section is to derive an integration by parts formula for the
$\Phi^4_3$ measure on the full space. To this end, we begin with the
corresponding integration by parts formula on the approximate level, that is,
for the measures $\nu_{M, \varepsilon}$ and pass to the limit.

Let $F$ be a cylinder functional on $\mathcal{S}' (\mathbb{R}^3)$, that is, $F
(\varphi) = \Phi (\varphi (f_1), \ldots, \varphi (f_n))$ for some polynomial $\Phi :
\mathbb{R}^n \rightarrow \mathbb{R}$ and $f_1, \ldots, f_n \in \mathcal{S}
(\mathbb{R}^3)$. Let $\mathD F (\varphi)$ denote the $L^2$-gradient of $F$.
Then  for fields $\varphi_{\varepsilon}$ defined on
$\Lambda_{\varepsilon}$ we have
\[ \frac{\partial F (\mathcal{E}^{\varepsilon}
   \varphi_{\varepsilon})}{\partial \varphi_{\varepsilon} (x)} = \varepsilon^d
   \sum_{i = 1}^n \partial_i \Phi ((\mathcal{E}^{\varepsilon}
   \varphi_{\varepsilon}) (f_1), \ldots, (\mathcal{E}^{\varepsilon}
   \varphi_{\varepsilon}) (f_n)) (w_{\varepsilon} \ast f_i) (x) =
   \varepsilon^d [w_{\varepsilon} \ast \mathD F (\mathcal{E}^{\varepsilon}
   \varphi_{\varepsilon})] (x), \]
where $x\in \Lambda_{\varepsilon}$ and $w_{\varepsilon}$ is the kernel involved in the definition of the
extension operator $\mathcal{E}^{\varepsilon}$ from Section~\ref{s:ext}. By integration by parts it
follows that
\[ \int [w_{\varepsilon} \ast \mathD F (\mathcal{E}^{\varepsilon} \varphi)]
   (x) \nu_{M, \varepsilon} (\mathd \varphi) = \frac{1}{\varepsilon^d} \int
   \frac{\partial F (\mathcal{E}^{\varepsilon} \varphi)}{\partial \varphi (x)}
   \nu_{M, \varepsilon} (\mathd \varphi) = \frac{2}{\varepsilon^d} \int F
   (\mathcal{E}^{\varepsilon} \varphi) \frac{\partial V_{M, \varepsilon}
   (\varphi)}{\partial \varphi (x)} \nu_{M, \varepsilon} (\mathd \varphi) \]
\begin{equation}
  = 2 \int F (\mathcal{E}^{\varepsilon} \varphi) [\lambda \varphi (x)^3 + (- 3\lambda a_{M,
  \varepsilon} + 3\lambda^2 b_{M, \varepsilon}) \varphi (x)] \nu_{M, \varepsilon}
  (\mathd \varphi) + 2 \int F (\mathcal{E}^{\varepsilon} \varphi) [m^2 -
  \Delta_{\varepsilon}] \varphi (x) \nu_{M, \varepsilon} (\mathd \varphi) .
  \label{eq:ibp1}
\end{equation}
According to Theorem~\ref{thm:main}, we can already pass to the limit on the
left hand side as well as in the second term on the right hand side of
{\eqref{eq:ibp1}}. Namely, we obtain for any accumulation point $\nu$ and any
(relabeled) subsequence $(\nu_{M, \varepsilon} \circ
(\mathcal{E}^{\varepsilon})^{- 1})_{M, \varepsilon}$ converging to $\nu$ that
the following convergences hold in the sense of distributions in the variable
$x \in \mathbb{R}^3$
\[ \int \mathcal{E}^{\varepsilon} [w_{\varepsilon} \ast \mathD F
   (\mathcal{E}^{\varepsilon} \varphi)] (x) \nu_{M, \varepsilon} (\mathd
   \varphi) \rightarrow \int \mathD F (\mathcal{E}^{\varepsilon} \varphi) (x)
   \nu (\mathd \varphi), \]
\[ \int F (\mathcal{E}^{\varepsilon} \varphi) \mathcal{E}^{\varepsilon} [m^2 -
   \Delta_{\varepsilon}] \varphi (x) \nu_{M, \varepsilon} (\mathd \varphi)
   \rightarrow \int F (\varphi) [m^2 - \Delta] \varphi (x) \nu (\mathd
   \varphi) . \]
The remainder of this section is devoted to the passage to the limit in
{\eqref{eq:ibp1}}, leading to the integration by parts formula for the
limiting measure in Theorem~\ref{thm:ibp} below. In particular, it is
necessary to find a way to control the convergence of the cubic term and to
interpret the limit under the $\Phi^4_3$ measure.

Let us denote
\[ \llbracket \varphi^3 \rrbracket_{M, \varepsilon} (y) \assign \varphi (y)^3
   + (- 3 a_{M, \varepsilon} + 3\lambda b_{M, \varepsilon}) \varphi (y) . \]
We shall analyze carefully the distributions $\mathcal{J}_{M, \varepsilon} (F)
\in \mathcal{S}' (\Lambda_{\varepsilon})$ given by
\[ \mathcal{J}_{M, \varepsilon} (F) \assign x \mapsto \int F
   (\mathcal{E}^{\varepsilon} \varphi) \llbracket \varphi^3 \rrbracket_{M,
   \varepsilon} (x) \nu_{M, \varepsilon} (\mathd \varphi), \]
in order to determine the limit of $\mathcal{E}^{\varepsilon} \mathcal{J}_{M,
\varepsilon} (F)$ (as a distribution in $x \in \mathbb{R}^3$) as $(M,
\varepsilon) \rightarrow (\infty, 0)$. Unfortunately, even for the Gaussian
case when $\lambda = 0$ one cannot give a well-defined meaning to the random
variable $\varphi^3$ under the measure $\nu$. Additive renormalization is not
enough to cure this problem since it is easy to see that the variance of the
putative Wick renormalized limiting field
\[ \llbracket \varphi^3 \rrbracket = \lim_{\varepsilon \rightarrow 0, M
   \rightarrow \infty} \mathcal{E}^{\varepsilon} \llbracket \varphi^3
   \rrbracket_{M, \varepsilon} \]
is infinite. In the best of the cases one can hope that the renormalized cube
$\llbracket \varphi^3 \rrbracket$ makes sense once integrated against smooth
cylinder functions $F (\varphi)$. Otherwise stated, one could try to prove
that $(\mathcal{J}_{M, \varepsilon})_{M, \varepsilon}$ converges as a linear
functional on cylinder test functions over $\mathcal{S}' (\mathbb{R}^3)$.

To this end, we work with the stationary solution $\varphi_{M, \varepsilon}$
and introduce the additional notation
\[ \llbracket \varphi_{M, \varepsilon}^3 \rrbracket (t, y) \assign \varphi_{M,
   \varepsilon} (t, y)^3 + (- 3 a_{M, \varepsilon} + 3\lambda b_{M, \varepsilon})
   \varphi_{M, \varepsilon} (t, y) . \]
As the next step, we employ the decomposition
\[ \varphi_{M, \varepsilon} = X_{M, \varepsilon} -\lambda X_{M,
   \varepsilon}^{\!\resizebox{0.6em}{!}{
\begin{tikzpicture}
\pgfpathmoveto{\pgfqpoint{0cm}{-0.035cm}}
\pgfpathlineto{\pgfqpoint{1.376cm}{-0.035cm}}
\pgfpathlineto{\pgfqpoint{1.376cm}{1.552cm}}
\pgfpathlineto{\pgfqpoint{0cm}{1.552cm}}
\pgfpathclose
\pgfusepath{clip}
\begin{pgfscope}
\begin{pgfscope}
\pgfpathmoveto{\pgfqpoint{0cm}{-0.035cm}}
\pgfpathlineto{\pgfqpoint{1.376cm}{-0.035cm}}
\pgfpathlineto{\pgfqpoint{1.376cm}{1.552cm}}
\pgfpathlineto{\pgfqpoint{0cm}{1.552cm}}
\pgfpathclose
\pgfusepath{clip}
\begin{pgfscope}
\begin{pgfscope}
\pgfsetdash{}{0cm}
\pgfsetlinewidth{0.818mm}
\pgfsetroundcap
\pgfsetroundjoin
\pgfsetmiterlimit{7.0}
\definecolor{eps2pgf_color}{gray}{0}\pgfsetstrokecolor{eps2pgf_color}\pgfsetfillcolor{eps2pgf_color}
\pgfpathmoveto{\pgfqpoint{0.117cm}{1.421cm}}
\pgfpathlineto{\pgfqpoint{0.682cm}{0.671cm}}
\pgfpathlineto{\pgfqpoint{1.246cm}{1.421cm}}
\pgfusepath{stroke}
\end{pgfscope}
\definecolor{eps2pgf_color}{gray}{0}\pgfsetstrokecolor{eps2pgf_color}\pgfsetfillcolor{eps2pgf_color}
\pgfpathmoveto{\pgfqpoint{0.273cm}{1.395cm}}
\pgfpathcurveto{\pgfqpoint{0.273cm}{1.432cm}}{\pgfqpoint{0.259cm}{1.467cm}}{\pgfqpoint{0.233cm}{1.492cm}}
\pgfpathcurveto{\pgfqpoint{0.207cm}{1.518cm}}{\pgfqpoint{0.173cm}{1.532cm}}{\pgfqpoint{0.137cm}{1.532cm}}
\pgfpathcurveto{\pgfqpoint{0.1cm}{1.532cm}}{\pgfqpoint{0.066cm}{1.518cm}}{\pgfqpoint{0.04cm}{1.492cm}}
\pgfpathcurveto{\pgfqpoint{0.014cm}{1.467cm}}{\pgfqpoint{0cm}{1.432cm}}{\pgfqpoint{0cm}{1.395cm}}
\pgfpathcurveto{\pgfqpoint{0cm}{1.359cm}}{\pgfqpoint{0.014cm}{1.324cm}}{\pgfqpoint{0.04cm}{1.299cm}}
\pgfpathcurveto{\pgfqpoint{0.066cm}{1.273cm}}{\pgfqpoint{0.1cm}{1.258cm}}{\pgfqpoint{0.137cm}{1.258cm}}
\pgfpathcurveto{\pgfqpoint{0.173cm}{1.258cm}}{\pgfqpoint{0.207cm}{1.273cm}}{\pgfqpoint{0.233cm}{1.299cm}}
\pgfpathcurveto{\pgfqpoint{0.259cm}{1.324cm}}{\pgfqpoint{0.273cm}{1.359cm}}{\pgfqpoint{0.273cm}{1.395cm}}
\pgfusepath{fill}
\begin{pgfscope}
\pgfsetdash{}{0cm}
\pgfsetlinewidth{0.818mm}
\pgfsetmiterlimit{7.0}
\pgfpathmoveto{\pgfqpoint{0.682cm}{0.671cm}}
\pgfpathlineto{\pgfqpoint{0.679cm}{1.418cm}}
\pgfusepath{stroke}
\end{pgfscope}
\pgfpathmoveto{\pgfqpoint{0.815cm}{1.399cm}}
\pgfpathcurveto{\pgfqpoint{0.815cm}{1.435cm}}{\pgfqpoint{0.801cm}{1.47cm}}{\pgfqpoint{0.775cm}{1.496cm}}
\pgfpathcurveto{\pgfqpoint{0.75cm}{1.521cm}}{\pgfqpoint{0.715cm}{1.536cm}}{\pgfqpoint{0.679cm}{1.536cm}}
\pgfpathcurveto{\pgfqpoint{0.643cm}{1.536cm}}{\pgfqpoint{0.608cm}{1.521cm}}{\pgfqpoint{0.582cm}{1.496cm}}
\pgfpathcurveto{\pgfqpoint{0.557cm}{1.47cm}}{\pgfqpoint{0.542cm}{1.435cm}}{\pgfqpoint{0.542cm}{1.399cm}}
\pgfpathcurveto{\pgfqpoint{0.542cm}{1.363cm}}{\pgfqpoint{0.557cm}{1.328cm}}{\pgfqpoint{0.582cm}{1.302cm}}
\pgfpathcurveto{\pgfqpoint{0.608cm}{1.276cm}}{\pgfqpoint{0.643cm}{1.262cm}}{\pgfqpoint{0.679cm}{1.262cm}}
\pgfpathcurveto{\pgfqpoint{0.715cm}{1.262cm}}{\pgfqpoint{0.75cm}{1.276cm}}{\pgfqpoint{0.775cm}{1.302cm}}
\pgfpathcurveto{\pgfqpoint{0.801cm}{1.328cm}}{\pgfqpoint{0.815cm}{1.363cm}}{\pgfqpoint{0.815cm}{1.399cm}}
\pgfusepath{fill}
\pgfpathmoveto{\pgfqpoint{1.345cm}{1.371cm}}
\pgfpathcurveto{\pgfqpoint{1.345cm}{1.408cm}}{\pgfqpoint{1.331cm}{1.442cm}}{\pgfqpoint{1.305cm}{1.468cm}}
\pgfpathcurveto{\pgfqpoint{1.28cm}{1.494cm}}{\pgfqpoint{1.245cm}{1.508cm}}{\pgfqpoint{1.209cm}{1.508cm}}
\pgfpathcurveto{\pgfqpoint{1.172cm}{1.508cm}}{\pgfqpoint{1.138cm}{1.494cm}}{\pgfqpoint{1.112cm}{1.468cm}}
\pgfpathcurveto{\pgfqpoint{1.087cm}{1.442cm}}{\pgfqpoint{1.072cm}{1.408cm}}{\pgfqpoint{1.072cm}{1.371cm}}
\pgfpathcurveto{\pgfqpoint{1.072cm}{1.335cm}}{\pgfqpoint{1.087cm}{1.3cm}}{\pgfqpoint{1.112cm}{1.274cm}}
\pgfpathcurveto{\pgfqpoint{1.138cm}{1.249cm}}{\pgfqpoint{1.172cm}{1.234cm}}{\pgfqpoint{1.209cm}{1.234cm}}
\pgfpathcurveto{\pgfqpoint{1.245cm}{1.234cm}}{\pgfqpoint{1.28cm}{1.249cm}}{\pgfqpoint{1.305cm}{1.274cm}}
\pgfpathcurveto{\pgfqpoint{1.331cm}{1.3cm}}{\pgfqpoint{1.345cm}{1.335cm}}{\pgfqpoint{1.345cm}{1.371cm}}
\pgfusepath{fill}
\begin{pgfscope}
\pgfsetdash{}{0cm}
\pgfsetlinewidth{0.818mm}
\pgfsetroundcap
\pgfsetmiterlimit{4.0}
\pgfpathmoveto{\pgfqpoint{0.682cm}{0.671cm}}
\pgfpathlineto{\pgfqpoint{0.682cm}{0.042cm}}
\pgfusepath{stroke}
\end{pgfscope}
\end{pgfscope}
\end{pgfscope}
\end{pgfscope}
\end{tikzpicture}}} + \zeta_{M, \varepsilon} \]
in order to find a decomposition that can be controlled by our estimates. We
rewrite
\[ \begin{array}{lll}
     \llbracket \varphi_{M, \varepsilon}^3 \rrbracket & = & \llbracket X_{M,
     \varepsilon}^3 \rrbracket + 3 \llbracket X_{M, \varepsilon}^2 \rrbracket
     (-\lambda X_{M, \varepsilon}^{\!\resizebox{0.6em}{!}{
\begin{tikzpicture}
\pgfpathmoveto{\pgfqpoint{0cm}{-0.035cm}}
\pgfpathlineto{\pgfqpoint{1.376cm}{-0.035cm}}
\pgfpathlineto{\pgfqpoint{1.376cm}{1.552cm}}
\pgfpathlineto{\pgfqpoint{0cm}{1.552cm}}
\pgfpathclose
\pgfusepath{clip}
\begin{pgfscope}
\begin{pgfscope}
\pgfpathmoveto{\pgfqpoint{0cm}{-0.035cm}}
\pgfpathlineto{\pgfqpoint{1.376cm}{-0.035cm}}
\pgfpathlineto{\pgfqpoint{1.376cm}{1.552cm}}
\pgfpathlineto{\pgfqpoint{0cm}{1.552cm}}
\pgfpathclose
\pgfusepath{clip}
\begin{pgfscope}
\begin{pgfscope}
\pgfsetdash{}{0cm}
\pgfsetlinewidth{0.818mm}
\pgfsetroundcap
\pgfsetroundjoin
\pgfsetmiterlimit{7.0}
\definecolor{eps2pgf_color}{gray}{0}\pgfsetstrokecolor{eps2pgf_color}\pgfsetfillcolor{eps2pgf_color}
\pgfpathmoveto{\pgfqpoint{0.117cm}{1.421cm}}
\pgfpathlineto{\pgfqpoint{0.682cm}{0.671cm}}
\pgfpathlineto{\pgfqpoint{1.246cm}{1.421cm}}
\pgfusepath{stroke}
\end{pgfscope}
\definecolor{eps2pgf_color}{gray}{0}\pgfsetstrokecolor{eps2pgf_color}\pgfsetfillcolor{eps2pgf_color}
\pgfpathmoveto{\pgfqpoint{0.273cm}{1.395cm}}
\pgfpathcurveto{\pgfqpoint{0.273cm}{1.432cm}}{\pgfqpoint{0.259cm}{1.467cm}}{\pgfqpoint{0.233cm}{1.492cm}}
\pgfpathcurveto{\pgfqpoint{0.207cm}{1.518cm}}{\pgfqpoint{0.173cm}{1.532cm}}{\pgfqpoint{0.137cm}{1.532cm}}
\pgfpathcurveto{\pgfqpoint{0.1cm}{1.532cm}}{\pgfqpoint{0.066cm}{1.518cm}}{\pgfqpoint{0.04cm}{1.492cm}}
\pgfpathcurveto{\pgfqpoint{0.014cm}{1.467cm}}{\pgfqpoint{0cm}{1.432cm}}{\pgfqpoint{0cm}{1.395cm}}
\pgfpathcurveto{\pgfqpoint{0cm}{1.359cm}}{\pgfqpoint{0.014cm}{1.324cm}}{\pgfqpoint{0.04cm}{1.299cm}}
\pgfpathcurveto{\pgfqpoint{0.066cm}{1.273cm}}{\pgfqpoint{0.1cm}{1.258cm}}{\pgfqpoint{0.137cm}{1.258cm}}
\pgfpathcurveto{\pgfqpoint{0.173cm}{1.258cm}}{\pgfqpoint{0.207cm}{1.273cm}}{\pgfqpoint{0.233cm}{1.299cm}}
\pgfpathcurveto{\pgfqpoint{0.259cm}{1.324cm}}{\pgfqpoint{0.273cm}{1.359cm}}{\pgfqpoint{0.273cm}{1.395cm}}
\pgfusepath{fill}
\begin{pgfscope}
\pgfsetdash{}{0cm}
\pgfsetlinewidth{0.818mm}
\pgfsetmiterlimit{7.0}
\pgfpathmoveto{\pgfqpoint{0.682cm}{0.671cm}}
\pgfpathlineto{\pgfqpoint{0.679cm}{1.418cm}}
\pgfusepath{stroke}
\end{pgfscope}
\pgfpathmoveto{\pgfqpoint{0.815cm}{1.399cm}}
\pgfpathcurveto{\pgfqpoint{0.815cm}{1.435cm}}{\pgfqpoint{0.801cm}{1.47cm}}{\pgfqpoint{0.775cm}{1.496cm}}
\pgfpathcurveto{\pgfqpoint{0.75cm}{1.521cm}}{\pgfqpoint{0.715cm}{1.536cm}}{\pgfqpoint{0.679cm}{1.536cm}}
\pgfpathcurveto{\pgfqpoint{0.643cm}{1.536cm}}{\pgfqpoint{0.608cm}{1.521cm}}{\pgfqpoint{0.582cm}{1.496cm}}
\pgfpathcurveto{\pgfqpoint{0.557cm}{1.47cm}}{\pgfqpoint{0.542cm}{1.435cm}}{\pgfqpoint{0.542cm}{1.399cm}}
\pgfpathcurveto{\pgfqpoint{0.542cm}{1.363cm}}{\pgfqpoint{0.557cm}{1.328cm}}{\pgfqpoint{0.582cm}{1.302cm}}
\pgfpathcurveto{\pgfqpoint{0.608cm}{1.276cm}}{\pgfqpoint{0.643cm}{1.262cm}}{\pgfqpoint{0.679cm}{1.262cm}}
\pgfpathcurveto{\pgfqpoint{0.715cm}{1.262cm}}{\pgfqpoint{0.75cm}{1.276cm}}{\pgfqpoint{0.775cm}{1.302cm}}
\pgfpathcurveto{\pgfqpoint{0.801cm}{1.328cm}}{\pgfqpoint{0.815cm}{1.363cm}}{\pgfqpoint{0.815cm}{1.399cm}}
\pgfusepath{fill}
\pgfpathmoveto{\pgfqpoint{1.345cm}{1.371cm}}
\pgfpathcurveto{\pgfqpoint{1.345cm}{1.408cm}}{\pgfqpoint{1.331cm}{1.442cm}}{\pgfqpoint{1.305cm}{1.468cm}}
\pgfpathcurveto{\pgfqpoint{1.28cm}{1.494cm}}{\pgfqpoint{1.245cm}{1.508cm}}{\pgfqpoint{1.209cm}{1.508cm}}
\pgfpathcurveto{\pgfqpoint{1.172cm}{1.508cm}}{\pgfqpoint{1.138cm}{1.494cm}}{\pgfqpoint{1.112cm}{1.468cm}}
\pgfpathcurveto{\pgfqpoint{1.087cm}{1.442cm}}{\pgfqpoint{1.072cm}{1.408cm}}{\pgfqpoint{1.072cm}{1.371cm}}
\pgfpathcurveto{\pgfqpoint{1.072cm}{1.335cm}}{\pgfqpoint{1.087cm}{1.3cm}}{\pgfqpoint{1.112cm}{1.274cm}}
\pgfpathcurveto{\pgfqpoint{1.138cm}{1.249cm}}{\pgfqpoint{1.172cm}{1.234cm}}{\pgfqpoint{1.209cm}{1.234cm}}
\pgfpathcurveto{\pgfqpoint{1.245cm}{1.234cm}}{\pgfqpoint{1.28cm}{1.249cm}}{\pgfqpoint{1.305cm}{1.274cm}}
\pgfpathcurveto{\pgfqpoint{1.331cm}{1.3cm}}{\pgfqpoint{1.345cm}{1.335cm}}{\pgfqpoint{1.345cm}{1.371cm}}
\pgfusepath{fill}
\begin{pgfscope}
\pgfsetdash{}{0cm}
\pgfsetlinewidth{0.818mm}
\pgfsetroundcap
\pgfsetmiterlimit{4.0}
\pgfpathmoveto{\pgfqpoint{0.682cm}{0.671cm}}
\pgfpathlineto{\pgfqpoint{0.682cm}{0.042cm}}
\pgfusepath{stroke}
\end{pgfscope}
\end{pgfscope}
\end{pgfscope}
\end{pgfscope}
\end{tikzpicture}}} + \zeta_{M, \varepsilon}) + 3\lambda b_{M,
     \varepsilon} \varphi_{M, \varepsilon}\\
     &  & + 3 X_{M, \varepsilon} (- \lambdaX_{M, \varepsilon}^{\!\resizebox{0.6em}{!}{
\begin{tikzpicture}
\pgfpathmoveto{\pgfqpoint{0cm}{-0.035cm}}
\pgfpathlineto{\pgfqpoint{1.376cm}{-0.035cm}}
\pgfpathlineto{\pgfqpoint{1.376cm}{1.552cm}}
\pgfpathlineto{\pgfqpoint{0cm}{1.552cm}}
\pgfpathclose
\pgfusepath{clip}
\begin{pgfscope}
\begin{pgfscope}
\pgfpathmoveto{\pgfqpoint{0cm}{-0.035cm}}
\pgfpathlineto{\pgfqpoint{1.376cm}{-0.035cm}}
\pgfpathlineto{\pgfqpoint{1.376cm}{1.552cm}}
\pgfpathlineto{\pgfqpoint{0cm}{1.552cm}}
\pgfpathclose
\pgfusepath{clip}
\begin{pgfscope}
\begin{pgfscope}
\pgfsetdash{}{0cm}
\pgfsetlinewidth{0.818mm}
\pgfsetroundcap
\pgfsetroundjoin
\pgfsetmiterlimit{7.0}
\definecolor{eps2pgf_color}{gray}{0}\pgfsetstrokecolor{eps2pgf_color}\pgfsetfillcolor{eps2pgf_color}
\pgfpathmoveto{\pgfqpoint{0.117cm}{1.421cm}}
\pgfpathlineto{\pgfqpoint{0.682cm}{0.671cm}}
\pgfpathlineto{\pgfqpoint{1.246cm}{1.421cm}}
\pgfusepath{stroke}
\end{pgfscope}
\definecolor{eps2pgf_color}{gray}{0}\pgfsetstrokecolor{eps2pgf_color}\pgfsetfillcolor{eps2pgf_color}
\pgfpathmoveto{\pgfqpoint{0.273cm}{1.395cm}}
\pgfpathcurveto{\pgfqpoint{0.273cm}{1.432cm}}{\pgfqpoint{0.259cm}{1.467cm}}{\pgfqpoint{0.233cm}{1.492cm}}
\pgfpathcurveto{\pgfqpoint{0.207cm}{1.518cm}}{\pgfqpoint{0.173cm}{1.532cm}}{\pgfqpoint{0.137cm}{1.532cm}}
\pgfpathcurveto{\pgfqpoint{0.1cm}{1.532cm}}{\pgfqpoint{0.066cm}{1.518cm}}{\pgfqpoint{0.04cm}{1.492cm}}
\pgfpathcurveto{\pgfqpoint{0.014cm}{1.467cm}}{\pgfqpoint{0cm}{1.432cm}}{\pgfqpoint{0cm}{1.395cm}}
\pgfpathcurveto{\pgfqpoint{0cm}{1.359cm}}{\pgfqpoint{0.014cm}{1.324cm}}{\pgfqpoint{0.04cm}{1.299cm}}
\pgfpathcurveto{\pgfqpoint{0.066cm}{1.273cm}}{\pgfqpoint{0.1cm}{1.258cm}}{\pgfqpoint{0.137cm}{1.258cm}}
\pgfpathcurveto{\pgfqpoint{0.173cm}{1.258cm}}{\pgfqpoint{0.207cm}{1.273cm}}{\pgfqpoint{0.233cm}{1.299cm}}
\pgfpathcurveto{\pgfqpoint{0.259cm}{1.324cm}}{\pgfqpoint{0.273cm}{1.359cm}}{\pgfqpoint{0.273cm}{1.395cm}}
\pgfusepath{fill}
\begin{pgfscope}
\pgfsetdash{}{0cm}
\pgfsetlinewidth{0.818mm}
\pgfsetmiterlimit{7.0}
\pgfpathmoveto{\pgfqpoint{0.682cm}{0.671cm}}
\pgfpathlineto{\pgfqpoint{0.679cm}{1.418cm}}
\pgfusepath{stroke}
\end{pgfscope}
\pgfpathmoveto{\pgfqpoint{0.815cm}{1.399cm}}
\pgfpathcurveto{\pgfqpoint{0.815cm}{1.435cm}}{\pgfqpoint{0.801cm}{1.47cm}}{\pgfqpoint{0.775cm}{1.496cm}}
\pgfpathcurveto{\pgfqpoint{0.75cm}{1.521cm}}{\pgfqpoint{0.715cm}{1.536cm}}{\pgfqpoint{0.679cm}{1.536cm}}
\pgfpathcurveto{\pgfqpoint{0.643cm}{1.536cm}}{\pgfqpoint{0.608cm}{1.521cm}}{\pgfqpoint{0.582cm}{1.496cm}}
\pgfpathcurveto{\pgfqpoint{0.557cm}{1.47cm}}{\pgfqpoint{0.542cm}{1.435cm}}{\pgfqpoint{0.542cm}{1.399cm}}
\pgfpathcurveto{\pgfqpoint{0.542cm}{1.363cm}}{\pgfqpoint{0.557cm}{1.328cm}}{\pgfqpoint{0.582cm}{1.302cm}}
\pgfpathcurveto{\pgfqpoint{0.608cm}{1.276cm}}{\pgfqpoint{0.643cm}{1.262cm}}{\pgfqpoint{0.679cm}{1.262cm}}
\pgfpathcurveto{\pgfqpoint{0.715cm}{1.262cm}}{\pgfqpoint{0.75cm}{1.276cm}}{\pgfqpoint{0.775cm}{1.302cm}}
\pgfpathcurveto{\pgfqpoint{0.801cm}{1.328cm}}{\pgfqpoint{0.815cm}{1.363cm}}{\pgfqpoint{0.815cm}{1.399cm}}
\pgfusepath{fill}
\pgfpathmoveto{\pgfqpoint{1.345cm}{1.371cm}}
\pgfpathcurveto{\pgfqpoint{1.345cm}{1.408cm}}{\pgfqpoint{1.331cm}{1.442cm}}{\pgfqpoint{1.305cm}{1.468cm}}
\pgfpathcurveto{\pgfqpoint{1.28cm}{1.494cm}}{\pgfqpoint{1.245cm}{1.508cm}}{\pgfqpoint{1.209cm}{1.508cm}}
\pgfpathcurveto{\pgfqpoint{1.172cm}{1.508cm}}{\pgfqpoint{1.138cm}{1.494cm}}{\pgfqpoint{1.112cm}{1.468cm}}
\pgfpathcurveto{\pgfqpoint{1.087cm}{1.442cm}}{\pgfqpoint{1.072cm}{1.408cm}}{\pgfqpoint{1.072cm}{1.371cm}}
\pgfpathcurveto{\pgfqpoint{1.072cm}{1.335cm}}{\pgfqpoint{1.087cm}{1.3cm}}{\pgfqpoint{1.112cm}{1.274cm}}
\pgfpathcurveto{\pgfqpoint{1.138cm}{1.249cm}}{\pgfqpoint{1.172cm}{1.234cm}}{\pgfqpoint{1.209cm}{1.234cm}}
\pgfpathcurveto{\pgfqpoint{1.245cm}{1.234cm}}{\pgfqpoint{1.28cm}{1.249cm}}{\pgfqpoint{1.305cm}{1.274cm}}
\pgfpathcurveto{\pgfqpoint{1.331cm}{1.3cm}}{\pgfqpoint{1.345cm}{1.335cm}}{\pgfqpoint{1.345cm}{1.371cm}}
\pgfusepath{fill}
\begin{pgfscope}
\pgfsetdash{}{0cm}
\pgfsetlinewidth{0.818mm}
\pgfsetroundcap
\pgfsetmiterlimit{4.0}
\pgfpathmoveto{\pgfqpoint{0.682cm}{0.671cm}}
\pgfpathlineto{\pgfqpoint{0.682cm}{0.042cm}}
\pgfusepath{stroke}
\end{pgfscope}
\end{pgfscope}
\end{pgfscope}
\end{pgfscope}
\end{tikzpicture}}} +
     \zeta_{M, \varepsilon})^2 + (-\lambda X_{M, \varepsilon}^{\!\resizebox{0.6em}{!}{
\begin{tikzpicture}
\pgfpathmoveto{\pgfqpoint{0cm}{-0.035cm}}
\pgfpathlineto{\pgfqpoint{1.376cm}{-0.035cm}}
\pgfpathlineto{\pgfqpoint{1.376cm}{1.552cm}}
\pgfpathlineto{\pgfqpoint{0cm}{1.552cm}}
\pgfpathclose
\pgfusepath{clip}
\begin{pgfscope}
\begin{pgfscope}
\pgfpathmoveto{\pgfqpoint{0cm}{-0.035cm}}
\pgfpathlineto{\pgfqpoint{1.376cm}{-0.035cm}}
\pgfpathlineto{\pgfqpoint{1.376cm}{1.552cm}}
\pgfpathlineto{\pgfqpoint{0cm}{1.552cm}}
\pgfpathclose
\pgfusepath{clip}
\begin{pgfscope}
\begin{pgfscope}
\pgfsetdash{}{0cm}
\pgfsetlinewidth{0.818mm}
\pgfsetroundcap
\pgfsetroundjoin
\pgfsetmiterlimit{7.0}
\definecolor{eps2pgf_color}{gray}{0}\pgfsetstrokecolor{eps2pgf_color}\pgfsetfillcolor{eps2pgf_color}
\pgfpathmoveto{\pgfqpoint{0.117cm}{1.421cm}}
\pgfpathlineto{\pgfqpoint{0.682cm}{0.671cm}}
\pgfpathlineto{\pgfqpoint{1.246cm}{1.421cm}}
\pgfusepath{stroke}
\end{pgfscope}
\definecolor{eps2pgf_color}{gray}{0}\pgfsetstrokecolor{eps2pgf_color}\pgfsetfillcolor{eps2pgf_color}
\pgfpathmoveto{\pgfqpoint{0.273cm}{1.395cm}}
\pgfpathcurveto{\pgfqpoint{0.273cm}{1.432cm}}{\pgfqpoint{0.259cm}{1.467cm}}{\pgfqpoint{0.233cm}{1.492cm}}
\pgfpathcurveto{\pgfqpoint{0.207cm}{1.518cm}}{\pgfqpoint{0.173cm}{1.532cm}}{\pgfqpoint{0.137cm}{1.532cm}}
\pgfpathcurveto{\pgfqpoint{0.1cm}{1.532cm}}{\pgfqpoint{0.066cm}{1.518cm}}{\pgfqpoint{0.04cm}{1.492cm}}
\pgfpathcurveto{\pgfqpoint{0.014cm}{1.467cm}}{\pgfqpoint{0cm}{1.432cm}}{\pgfqpoint{0cm}{1.395cm}}
\pgfpathcurveto{\pgfqpoint{0cm}{1.359cm}}{\pgfqpoint{0.014cm}{1.324cm}}{\pgfqpoint{0.04cm}{1.299cm}}
\pgfpathcurveto{\pgfqpoint{0.066cm}{1.273cm}}{\pgfqpoint{0.1cm}{1.258cm}}{\pgfqpoint{0.137cm}{1.258cm}}
\pgfpathcurveto{\pgfqpoint{0.173cm}{1.258cm}}{\pgfqpoint{0.207cm}{1.273cm}}{\pgfqpoint{0.233cm}{1.299cm}}
\pgfpathcurveto{\pgfqpoint{0.259cm}{1.324cm}}{\pgfqpoint{0.273cm}{1.359cm}}{\pgfqpoint{0.273cm}{1.395cm}}
\pgfusepath{fill}
\begin{pgfscope}
\pgfsetdash{}{0cm}
\pgfsetlinewidth{0.818mm}
\pgfsetmiterlimit{7.0}
\pgfpathmoveto{\pgfqpoint{0.682cm}{0.671cm}}
\pgfpathlineto{\pgfqpoint{0.679cm}{1.418cm}}
\pgfusepath{stroke}
\end{pgfscope}
\pgfpathmoveto{\pgfqpoint{0.815cm}{1.399cm}}
\pgfpathcurveto{\pgfqpoint{0.815cm}{1.435cm}}{\pgfqpoint{0.801cm}{1.47cm}}{\pgfqpoint{0.775cm}{1.496cm}}
\pgfpathcurveto{\pgfqpoint{0.75cm}{1.521cm}}{\pgfqpoint{0.715cm}{1.536cm}}{\pgfqpoint{0.679cm}{1.536cm}}
\pgfpathcurveto{\pgfqpoint{0.643cm}{1.536cm}}{\pgfqpoint{0.608cm}{1.521cm}}{\pgfqpoint{0.582cm}{1.496cm}}
\pgfpathcurveto{\pgfqpoint{0.557cm}{1.47cm}}{\pgfqpoint{0.542cm}{1.435cm}}{\pgfqpoint{0.542cm}{1.399cm}}
\pgfpathcurveto{\pgfqpoint{0.542cm}{1.363cm}}{\pgfqpoint{0.557cm}{1.328cm}}{\pgfqpoint{0.582cm}{1.302cm}}
\pgfpathcurveto{\pgfqpoint{0.608cm}{1.276cm}}{\pgfqpoint{0.643cm}{1.262cm}}{\pgfqpoint{0.679cm}{1.262cm}}
\pgfpathcurveto{\pgfqpoint{0.715cm}{1.262cm}}{\pgfqpoint{0.75cm}{1.276cm}}{\pgfqpoint{0.775cm}{1.302cm}}
\pgfpathcurveto{\pgfqpoint{0.801cm}{1.328cm}}{\pgfqpoint{0.815cm}{1.363cm}}{\pgfqpoint{0.815cm}{1.399cm}}
\pgfusepath{fill}
\pgfpathmoveto{\pgfqpoint{1.345cm}{1.371cm}}
\pgfpathcurveto{\pgfqpoint{1.345cm}{1.408cm}}{\pgfqpoint{1.331cm}{1.442cm}}{\pgfqpoint{1.305cm}{1.468cm}}
\pgfpathcurveto{\pgfqpoint{1.28cm}{1.494cm}}{\pgfqpoint{1.245cm}{1.508cm}}{\pgfqpoint{1.209cm}{1.508cm}}
\pgfpathcurveto{\pgfqpoint{1.172cm}{1.508cm}}{\pgfqpoint{1.138cm}{1.494cm}}{\pgfqpoint{1.112cm}{1.468cm}}
\pgfpathcurveto{\pgfqpoint{1.087cm}{1.442cm}}{\pgfqpoint{1.072cm}{1.408cm}}{\pgfqpoint{1.072cm}{1.371cm}}
\pgfpathcurveto{\pgfqpoint{1.072cm}{1.335cm}}{\pgfqpoint{1.087cm}{1.3cm}}{\pgfqpoint{1.112cm}{1.274cm}}
\pgfpathcurveto{\pgfqpoint{1.138cm}{1.249cm}}{\pgfqpoint{1.172cm}{1.234cm}}{\pgfqpoint{1.209cm}{1.234cm}}
\pgfpathcurveto{\pgfqpoint{1.245cm}{1.234cm}}{\pgfqpoint{1.28cm}{1.249cm}}{\pgfqpoint{1.305cm}{1.274cm}}
\pgfpathcurveto{\pgfqpoint{1.331cm}{1.3cm}}{\pgfqpoint{1.345cm}{1.335cm}}{\pgfqpoint{1.345cm}{1.371cm}}
\pgfusepath{fill}
\begin{pgfscope}
\pgfsetdash{}{0cm}
\pgfsetlinewidth{0.818mm}
\pgfsetroundcap
\pgfsetmiterlimit{4.0}
\pgfpathmoveto{\pgfqpoint{0.682cm}{0.671cm}}
\pgfpathlineto{\pgfqpoint{0.682cm}{0.042cm}}
\pgfusepath{stroke}
\end{pgfscope}
\end{pgfscope}
\end{pgfscope}
\end{pgfscope}
\end{tikzpicture}}} + \zeta_{M,
     \varepsilon})^3 .
   \end{array} \]
Next, we use the paraproducts and paracontrolled ansatz to control the various
resonant products. For the renormalized resonant product $3 \llbracket X_{M,
\varepsilon}^2 \rrbracket \circ (-\lambda X_{M, \varepsilon}^{\!\resizebox{0.6em}{!}{
\begin{tikzpicture}
\pgfpathmoveto{\pgfqpoint{0cm}{-0.035cm}}
\pgfpathlineto{\pgfqpoint{1.376cm}{-0.035cm}}
\pgfpathlineto{\pgfqpoint{1.376cm}{1.552cm}}
\pgfpathlineto{\pgfqpoint{0cm}{1.552cm}}
\pgfpathclose
\pgfusepath{clip}
\begin{pgfscope}
\begin{pgfscope}
\pgfpathmoveto{\pgfqpoint{0cm}{-0.035cm}}
\pgfpathlineto{\pgfqpoint{1.376cm}{-0.035cm}}
\pgfpathlineto{\pgfqpoint{1.376cm}{1.552cm}}
\pgfpathlineto{\pgfqpoint{0cm}{1.552cm}}
\pgfpathclose
\pgfusepath{clip}
\begin{pgfscope}
\begin{pgfscope}
\pgfsetdash{}{0cm}
\pgfsetlinewidth{0.818mm}
\pgfsetroundcap
\pgfsetroundjoin
\pgfsetmiterlimit{7.0}
\definecolor{eps2pgf_color}{gray}{0}\pgfsetstrokecolor{eps2pgf_color}\pgfsetfillcolor{eps2pgf_color}
\pgfpathmoveto{\pgfqpoint{0.117cm}{1.421cm}}
\pgfpathlineto{\pgfqpoint{0.682cm}{0.671cm}}
\pgfpathlineto{\pgfqpoint{1.246cm}{1.421cm}}
\pgfusepath{stroke}
\end{pgfscope}
\definecolor{eps2pgf_color}{gray}{0}\pgfsetstrokecolor{eps2pgf_color}\pgfsetfillcolor{eps2pgf_color}
\pgfpathmoveto{\pgfqpoint{0.273cm}{1.395cm}}
\pgfpathcurveto{\pgfqpoint{0.273cm}{1.432cm}}{\pgfqpoint{0.259cm}{1.467cm}}{\pgfqpoint{0.233cm}{1.492cm}}
\pgfpathcurveto{\pgfqpoint{0.207cm}{1.518cm}}{\pgfqpoint{0.173cm}{1.532cm}}{\pgfqpoint{0.137cm}{1.532cm}}
\pgfpathcurveto{\pgfqpoint{0.1cm}{1.532cm}}{\pgfqpoint{0.066cm}{1.518cm}}{\pgfqpoint{0.04cm}{1.492cm}}
\pgfpathcurveto{\pgfqpoint{0.014cm}{1.467cm}}{\pgfqpoint{0cm}{1.432cm}}{\pgfqpoint{0cm}{1.395cm}}
\pgfpathcurveto{\pgfqpoint{0cm}{1.359cm}}{\pgfqpoint{0.014cm}{1.324cm}}{\pgfqpoint{0.04cm}{1.299cm}}
\pgfpathcurveto{\pgfqpoint{0.066cm}{1.273cm}}{\pgfqpoint{0.1cm}{1.258cm}}{\pgfqpoint{0.137cm}{1.258cm}}
\pgfpathcurveto{\pgfqpoint{0.173cm}{1.258cm}}{\pgfqpoint{0.207cm}{1.273cm}}{\pgfqpoint{0.233cm}{1.299cm}}
\pgfpathcurveto{\pgfqpoint{0.259cm}{1.324cm}}{\pgfqpoint{0.273cm}{1.359cm}}{\pgfqpoint{0.273cm}{1.395cm}}
\pgfusepath{fill}
\begin{pgfscope}
\pgfsetdash{}{0cm}
\pgfsetlinewidth{0.818mm}
\pgfsetmiterlimit{7.0}
\pgfpathmoveto{\pgfqpoint{0.682cm}{0.671cm}}
\pgfpathlineto{\pgfqpoint{0.679cm}{1.418cm}}
\pgfusepath{stroke}
\end{pgfscope}
\pgfpathmoveto{\pgfqpoint{0.815cm}{1.399cm}}
\pgfpathcurveto{\pgfqpoint{0.815cm}{1.435cm}}{\pgfqpoint{0.801cm}{1.47cm}}{\pgfqpoint{0.775cm}{1.496cm}}
\pgfpathcurveto{\pgfqpoint{0.75cm}{1.521cm}}{\pgfqpoint{0.715cm}{1.536cm}}{\pgfqpoint{0.679cm}{1.536cm}}
\pgfpathcurveto{\pgfqpoint{0.643cm}{1.536cm}}{\pgfqpoint{0.608cm}{1.521cm}}{\pgfqpoint{0.582cm}{1.496cm}}
\pgfpathcurveto{\pgfqpoint{0.557cm}{1.47cm}}{\pgfqpoint{0.542cm}{1.435cm}}{\pgfqpoint{0.542cm}{1.399cm}}
\pgfpathcurveto{\pgfqpoint{0.542cm}{1.363cm}}{\pgfqpoint{0.557cm}{1.328cm}}{\pgfqpoint{0.582cm}{1.302cm}}
\pgfpathcurveto{\pgfqpoint{0.608cm}{1.276cm}}{\pgfqpoint{0.643cm}{1.262cm}}{\pgfqpoint{0.679cm}{1.262cm}}
\pgfpathcurveto{\pgfqpoint{0.715cm}{1.262cm}}{\pgfqpoint{0.75cm}{1.276cm}}{\pgfqpoint{0.775cm}{1.302cm}}
\pgfpathcurveto{\pgfqpoint{0.801cm}{1.328cm}}{\pgfqpoint{0.815cm}{1.363cm}}{\pgfqpoint{0.815cm}{1.399cm}}
\pgfusepath{fill}
\pgfpathmoveto{\pgfqpoint{1.345cm}{1.371cm}}
\pgfpathcurveto{\pgfqpoint{1.345cm}{1.408cm}}{\pgfqpoint{1.331cm}{1.442cm}}{\pgfqpoint{1.305cm}{1.468cm}}
\pgfpathcurveto{\pgfqpoint{1.28cm}{1.494cm}}{\pgfqpoint{1.245cm}{1.508cm}}{\pgfqpoint{1.209cm}{1.508cm}}
\pgfpathcurveto{\pgfqpoint{1.172cm}{1.508cm}}{\pgfqpoint{1.138cm}{1.494cm}}{\pgfqpoint{1.112cm}{1.468cm}}
\pgfpathcurveto{\pgfqpoint{1.087cm}{1.442cm}}{\pgfqpoint{1.072cm}{1.408cm}}{\pgfqpoint{1.072cm}{1.371cm}}
\pgfpathcurveto{\pgfqpoint{1.072cm}{1.335cm}}{\pgfqpoint{1.087cm}{1.3cm}}{\pgfqpoint{1.112cm}{1.274cm}}
\pgfpathcurveto{\pgfqpoint{1.138cm}{1.249cm}}{\pgfqpoint{1.172cm}{1.234cm}}{\pgfqpoint{1.209cm}{1.234cm}}
\pgfpathcurveto{\pgfqpoint{1.245cm}{1.234cm}}{\pgfqpoint{1.28cm}{1.249cm}}{\pgfqpoint{1.305cm}{1.274cm}}
\pgfpathcurveto{\pgfqpoint{1.331cm}{1.3cm}}{\pgfqpoint{1.345cm}{1.335cm}}{\pgfqpoint{1.345cm}{1.371cm}}
\pgfusepath{fill}
\begin{pgfscope}
\pgfsetdash{}{0cm}
\pgfsetlinewidth{0.818mm}
\pgfsetroundcap
\pgfsetmiterlimit{4.0}
\pgfpathmoveto{\pgfqpoint{0.682cm}{0.671cm}}
\pgfpathlineto{\pgfqpoint{0.682cm}{0.042cm}}
\pgfusepath{stroke}
\end{pgfscope}
\end{pgfscope}
\end{pgfscope}
\end{pgfscope}
\end{tikzpicture}}} + \zeta_{M,
\varepsilon}) + 3\lambda b_{M, \varepsilon} \varphi_{M, \varepsilon}$ we first recall
that
\[ \varphi_{M, \varepsilon} = X_{M, \varepsilon} + Y_{M, \varepsilon} +
   \phi_{M, \varepsilon}, \qquad \phi_{M, \varepsilon} = - 3\lambda X_{M,
   \varepsilon}^{\!\resizebox{0.6em}{!}{
\begin{tikzpicture}
\pgfpathmoveto{\pgfqpoint{0cm}{0cm}}
\pgfpathlineto{\pgfqpoint{1.376cm}{0cm}}
\pgfpathlineto{\pgfqpoint{1.376cm}{1.588cm}}
\pgfpathlineto{\pgfqpoint{0cm}{1.588cm}}
\pgfpathclose
\pgfusepath{clip}
\begin{pgfscope}
\begin{pgfscope}
\pgfpathmoveto{\pgfqpoint{0cm}{0cm}}
\pgfpathlineto{\pgfqpoint{1.376cm}{0cm}}
\pgfpathlineto{\pgfqpoint{1.376cm}{1.588cm}}
\pgfpathlineto{\pgfqpoint{0cm}{1.588cm}}
\pgfpathclose
\pgfusepath{clip}
\begin{pgfscope}
\begin{pgfscope}
\definecolor{eps2pgf_color}{gray}{0.976471}\pgfsetstrokecolor{eps2pgf_color}\pgfsetfillcolor{eps2pgf_color}
\pgfpathmoveto{\pgfqpoint{0cm}{0cm}}
\pgfpathlineto{\pgfqpoint{1.376cm}{0cm}}
\pgfpathlineto{\pgfqpoint{1.376cm}{1.588cm}}
\pgfpathlineto{\pgfqpoint{0cm}{1.588cm}}
\pgfpathclose
\pgfusepath{fill}
\end{pgfscope}
\begin{pgfscope}
\pgfsetdash{}{0cm}
\pgfsetlinewidth{0.818mm}
\pgfsetroundcap
\pgfsetroundjoin
\pgfsetmiterlimit{7.0}
\definecolor{eps2pgf_color}{gray}{0}\pgfsetstrokecolor{eps2pgf_color}\pgfsetfillcolor{eps2pgf_color}
\pgfpathmoveto{\pgfqpoint{0.117cm}{1.476cm}}
\pgfpathlineto{\pgfqpoint{0.682cm}{0.726cm}}
\pgfpathlineto{\pgfqpoint{1.246cm}{1.476cm}}
\pgfusepath{stroke}
\end{pgfscope}
\definecolor{eps2pgf_color}{gray}{0}\pgfsetstrokecolor{eps2pgf_color}\pgfsetfillcolor{eps2pgf_color}
\pgfpathmoveto{\pgfqpoint{0.273cm}{1.451cm}}
\pgfpathcurveto{\pgfqpoint{0.273cm}{1.487cm}}{\pgfqpoint{0.259cm}{1.522cm}}{\pgfqpoint{0.233cm}{1.547cm}}
\pgfpathcurveto{\pgfqpoint{0.207cm}{1.573cm}}{\pgfqpoint{0.173cm}{1.588cm}}{\pgfqpoint{0.137cm}{1.588cm}}
\pgfpathcurveto{\pgfqpoint{0.1cm}{1.588cm}}{\pgfqpoint{0.066cm}{1.573cm}}{\pgfqpoint{0.04cm}{1.547cm}}
\pgfpathcurveto{\pgfqpoint{0.014cm}{1.522cm}}{\pgfqpoint{0cm}{1.487cm}}{\pgfqpoint{0cm}{1.451cm}}
\pgfpathcurveto{\pgfqpoint{0cm}{1.414cm}}{\pgfqpoint{0.014cm}{1.379cm}}{\pgfqpoint{0.04cm}{1.354cm}}
\pgfpathcurveto{\pgfqpoint{0.066cm}{1.328cm}}{\pgfqpoint{0.1cm}{1.314cm}}{\pgfqpoint{0.137cm}{1.314cm}}
\pgfpathcurveto{\pgfqpoint{0.173cm}{1.314cm}}{\pgfqpoint{0.207cm}{1.328cm}}{\pgfqpoint{0.233cm}{1.354cm}}
\pgfpathcurveto{\pgfqpoint{0.259cm}{1.379cm}}{\pgfqpoint{0.273cm}{1.414cm}}{\pgfqpoint{0.273cm}{1.451cm}}
\pgfusepath{fill}
\pgfpathmoveto{\pgfqpoint{1.345cm}{1.426cm}}
\pgfpathcurveto{\pgfqpoint{1.345cm}{1.463cm}}{\pgfqpoint{1.331cm}{1.497cm}}{\pgfqpoint{1.305cm}{1.523cm}}
\pgfpathcurveto{\pgfqpoint{1.28cm}{1.549cm}}{\pgfqpoint{1.245cm}{1.563cm}}{\pgfqpoint{1.209cm}{1.563cm}}
\pgfpathcurveto{\pgfqpoint{1.172cm}{1.563cm}}{\pgfqpoint{1.138cm}{1.549cm}}{\pgfqpoint{1.112cm}{1.523cm}}
\pgfpathcurveto{\pgfqpoint{1.087cm}{1.497cm}}{\pgfqpoint{1.072cm}{1.463cm}}{\pgfqpoint{1.072cm}{1.426cm}}
\pgfpathcurveto{\pgfqpoint{1.072cm}{1.39cm}}{\pgfqpoint{1.087cm}{1.355cm}}{\pgfqpoint{1.112cm}{1.329cm}}
\pgfpathcurveto{\pgfqpoint{1.138cm}{1.304cm}}{\pgfqpoint{1.172cm}{1.289cm}}{\pgfqpoint{1.209cm}{1.289cm}}
\pgfpathcurveto{\pgfqpoint{1.245cm}{1.289cm}}{\pgfqpoint{1.28cm}{1.304cm}}{\pgfqpoint{1.305cm}{1.329cm}}
\pgfpathcurveto{\pgfqpoint{1.331cm}{1.355cm}}{\pgfqpoint{1.345cm}{1.39cm}}{\pgfqpoint{1.345cm}{1.426cm}}
\pgfusepath{fill}
\begin{pgfscope}
\pgfsetdash{}{0cm}
\pgfsetlinewidth{0.818mm}
\pgfsetroundcap
\pgfsetmiterlimit{4.0}
\pgfpathmoveto{\pgfqpoint{0.682cm}{0.726cm}}
\pgfpathlineto{\pgfqpoint{0.682cm}{0.097cm}}
\pgfusepath{stroke}
\end{pgfscope}
\end{pgfscope}
\end{pgfscope}
\end{pgfscope}
\end{tikzpicture}}} \succ \phi_{M, \varepsilon} + \chi_{M, \varepsilon} . \]
Therefore using the definition of $Z_{M, \varepsilon}$ in~{\eqref{eq:def-Z}}
we have
\[ \begin{array}{lll}
     3 \llbracket X_{M, \varepsilon}^2 \rrbracket \circ (-\lambda X_{M,
     \varepsilon}^{\!\resizebox{0.6em}{!}{
\begin{tikzpicture}
\pgfpathmoveto{\pgfqpoint{0cm}{-0.035cm}}
\pgfpathlineto{\pgfqpoint{1.376cm}{-0.035cm}}
\pgfpathlineto{\pgfqpoint{1.376cm}{1.552cm}}
\pgfpathlineto{\pgfqpoint{0cm}{1.552cm}}
\pgfpathclose
\pgfusepath{clip}
\begin{pgfscope}
\begin{pgfscope}
\pgfpathmoveto{\pgfqpoint{0cm}{-0.035cm}}
\pgfpathlineto{\pgfqpoint{1.376cm}{-0.035cm}}
\pgfpathlineto{\pgfqpoint{1.376cm}{1.552cm}}
\pgfpathlineto{\pgfqpoint{0cm}{1.552cm}}
\pgfpathclose
\pgfusepath{clip}
\begin{pgfscope}
\begin{pgfscope}
\pgfsetdash{}{0cm}
\pgfsetlinewidth{0.818mm}
\pgfsetroundcap
\pgfsetroundjoin
\pgfsetmiterlimit{7.0}
\definecolor{eps2pgf_color}{gray}{0}\pgfsetstrokecolor{eps2pgf_color}\pgfsetfillcolor{eps2pgf_color}
\pgfpathmoveto{\pgfqpoint{0.117cm}{1.421cm}}
\pgfpathlineto{\pgfqpoint{0.682cm}{0.671cm}}
\pgfpathlineto{\pgfqpoint{1.246cm}{1.421cm}}
\pgfusepath{stroke}
\end{pgfscope}
\definecolor{eps2pgf_color}{gray}{0}\pgfsetstrokecolor{eps2pgf_color}\pgfsetfillcolor{eps2pgf_color}
\pgfpathmoveto{\pgfqpoint{0.273cm}{1.395cm}}
\pgfpathcurveto{\pgfqpoint{0.273cm}{1.432cm}}{\pgfqpoint{0.259cm}{1.467cm}}{\pgfqpoint{0.233cm}{1.492cm}}
\pgfpathcurveto{\pgfqpoint{0.207cm}{1.518cm}}{\pgfqpoint{0.173cm}{1.532cm}}{\pgfqpoint{0.137cm}{1.532cm}}
\pgfpathcurveto{\pgfqpoint{0.1cm}{1.532cm}}{\pgfqpoint{0.066cm}{1.518cm}}{\pgfqpoint{0.04cm}{1.492cm}}
\pgfpathcurveto{\pgfqpoint{0.014cm}{1.467cm}}{\pgfqpoint{0cm}{1.432cm}}{\pgfqpoint{0cm}{1.395cm}}
\pgfpathcurveto{\pgfqpoint{0cm}{1.359cm}}{\pgfqpoint{0.014cm}{1.324cm}}{\pgfqpoint{0.04cm}{1.299cm}}
\pgfpathcurveto{\pgfqpoint{0.066cm}{1.273cm}}{\pgfqpoint{0.1cm}{1.258cm}}{\pgfqpoint{0.137cm}{1.258cm}}
\pgfpathcurveto{\pgfqpoint{0.173cm}{1.258cm}}{\pgfqpoint{0.207cm}{1.273cm}}{\pgfqpoint{0.233cm}{1.299cm}}
\pgfpathcurveto{\pgfqpoint{0.259cm}{1.324cm}}{\pgfqpoint{0.273cm}{1.359cm}}{\pgfqpoint{0.273cm}{1.395cm}}
\pgfusepath{fill}
\begin{pgfscope}
\pgfsetdash{}{0cm}
\pgfsetlinewidth{0.818mm}
\pgfsetmiterlimit{7.0}
\pgfpathmoveto{\pgfqpoint{0.682cm}{0.671cm}}
\pgfpathlineto{\pgfqpoint{0.679cm}{1.418cm}}
\pgfusepath{stroke}
\end{pgfscope}
\pgfpathmoveto{\pgfqpoint{0.815cm}{1.399cm}}
\pgfpathcurveto{\pgfqpoint{0.815cm}{1.435cm}}{\pgfqpoint{0.801cm}{1.47cm}}{\pgfqpoint{0.775cm}{1.496cm}}
\pgfpathcurveto{\pgfqpoint{0.75cm}{1.521cm}}{\pgfqpoint{0.715cm}{1.536cm}}{\pgfqpoint{0.679cm}{1.536cm}}
\pgfpathcurveto{\pgfqpoint{0.643cm}{1.536cm}}{\pgfqpoint{0.608cm}{1.521cm}}{\pgfqpoint{0.582cm}{1.496cm}}
\pgfpathcurveto{\pgfqpoint{0.557cm}{1.47cm}}{\pgfqpoint{0.542cm}{1.435cm}}{\pgfqpoint{0.542cm}{1.399cm}}
\pgfpathcurveto{\pgfqpoint{0.542cm}{1.363cm}}{\pgfqpoint{0.557cm}{1.328cm}}{\pgfqpoint{0.582cm}{1.302cm}}
\pgfpathcurveto{\pgfqpoint{0.608cm}{1.276cm}}{\pgfqpoint{0.643cm}{1.262cm}}{\pgfqpoint{0.679cm}{1.262cm}}
\pgfpathcurveto{\pgfqpoint{0.715cm}{1.262cm}}{\pgfqpoint{0.75cm}{1.276cm}}{\pgfqpoint{0.775cm}{1.302cm}}
\pgfpathcurveto{\pgfqpoint{0.801cm}{1.328cm}}{\pgfqpoint{0.815cm}{1.363cm}}{\pgfqpoint{0.815cm}{1.399cm}}
\pgfusepath{fill}
\pgfpathmoveto{\pgfqpoint{1.345cm}{1.371cm}}
\pgfpathcurveto{\pgfqpoint{1.345cm}{1.408cm}}{\pgfqpoint{1.331cm}{1.442cm}}{\pgfqpoint{1.305cm}{1.468cm}}
\pgfpathcurveto{\pgfqpoint{1.28cm}{1.494cm}}{\pgfqpoint{1.245cm}{1.508cm}}{\pgfqpoint{1.209cm}{1.508cm}}
\pgfpathcurveto{\pgfqpoint{1.172cm}{1.508cm}}{\pgfqpoint{1.138cm}{1.494cm}}{\pgfqpoint{1.112cm}{1.468cm}}
\pgfpathcurveto{\pgfqpoint{1.087cm}{1.442cm}}{\pgfqpoint{1.072cm}{1.408cm}}{\pgfqpoint{1.072cm}{1.371cm}}
\pgfpathcurveto{\pgfqpoint{1.072cm}{1.335cm}}{\pgfqpoint{1.087cm}{1.3cm}}{\pgfqpoint{1.112cm}{1.274cm}}
\pgfpathcurveto{\pgfqpoint{1.138cm}{1.249cm}}{\pgfqpoint{1.172cm}{1.234cm}}{\pgfqpoint{1.209cm}{1.234cm}}
\pgfpathcurveto{\pgfqpoint{1.245cm}{1.234cm}}{\pgfqpoint{1.28cm}{1.249cm}}{\pgfqpoint{1.305cm}{1.274cm}}
\pgfpathcurveto{\pgfqpoint{1.331cm}{1.3cm}}{\pgfqpoint{1.345cm}{1.335cm}}{\pgfqpoint{1.345cm}{1.371cm}}
\pgfusepath{fill}
\begin{pgfscope}
\pgfsetdash{}{0cm}
\pgfsetlinewidth{0.818mm}
\pgfsetroundcap
\pgfsetmiterlimit{4.0}
\pgfpathmoveto{\pgfqpoint{0.682cm}{0.671cm}}
\pgfpathlineto{\pgfqpoint{0.682cm}{0.042cm}}
\pgfusepath{stroke}
\end{pgfscope}
\end{pgfscope}
\end{pgfscope}
\end{pgfscope}
\end{tikzpicture}}} + \zeta_{M, \varepsilon}) + 3\lambda b_{M, \varepsilon}
     \varphi_{M, \varepsilon} & = & 3 \llbracket X_{M, \varepsilon}^2
     \rrbracket \circ (Y_{M, \varepsilon} + \phi_{M, \varepsilon}) + 3\lambda b_{M,
     \varepsilon} \varphi_{M, \varepsilon}\\
     & = & \underbrace{3 \llbracket X_{M, \varepsilon}^2 \rrbracket \circ
     Y_{M, \varepsilon} + 3\lambda b_{M, \varepsilon} (X_{M, \varepsilon} + Y_{M,
     \varepsilon})}_{- \lambda Z_{M, \varepsilon}}\\
     &  & + 3 \llbracket X_{M, \varepsilon}^2 \rrbracket \circ \phi_{M,
     \varepsilon} + 3\lambda b_{M, \varepsilon} \phi_{M, \varepsilon}
   \end{array} \]
and
\[ 3 \llbracket X_{M, \varepsilon}^2 \rrbracket \circ \phi_{M, \varepsilon} +
   3\lambda b_{M, \varepsilon} \phi_{M, \varepsilon} = 3 \llbracket X_{M,
   \varepsilon}^2 \rrbracket \circ ( - 3\lambda X_{M, \varepsilon}^{\!\resizebox{0.6em}{!}{
\begin{tikzpicture}
\pgfpathmoveto{\pgfqpoint{0cm}{0cm}}
\pgfpathlineto{\pgfqpoint{1.376cm}{0cm}}
\pgfpathlineto{\pgfqpoint{1.376cm}{1.588cm}}
\pgfpathlineto{\pgfqpoint{0cm}{1.588cm}}
\pgfpathclose
\pgfusepath{clip}
\begin{pgfscope}
\begin{pgfscope}
\pgfpathmoveto{\pgfqpoint{0cm}{0cm}}
\pgfpathlineto{\pgfqpoint{1.376cm}{0cm}}
\pgfpathlineto{\pgfqpoint{1.376cm}{1.588cm}}
\pgfpathlineto{\pgfqpoint{0cm}{1.588cm}}
\pgfpathclose
\pgfusepath{clip}
\begin{pgfscope}
\begin{pgfscope}
\definecolor{eps2pgf_color}{gray}{0.976471}\pgfsetstrokecolor{eps2pgf_color}\pgfsetfillcolor{eps2pgf_color}
\pgfpathmoveto{\pgfqpoint{0cm}{0cm}}
\pgfpathlineto{\pgfqpoint{1.376cm}{0cm}}
\pgfpathlineto{\pgfqpoint{1.376cm}{1.588cm}}
\pgfpathlineto{\pgfqpoint{0cm}{1.588cm}}
\pgfpathclose
\pgfusepath{fill}
\end{pgfscope}
\begin{pgfscope}
\pgfsetdash{}{0cm}
\pgfsetlinewidth{0.818mm}
\pgfsetroundcap
\pgfsetroundjoin
\pgfsetmiterlimit{7.0}
\definecolor{eps2pgf_color}{gray}{0}\pgfsetstrokecolor{eps2pgf_color}\pgfsetfillcolor{eps2pgf_color}
\pgfpathmoveto{\pgfqpoint{0.117cm}{1.476cm}}
\pgfpathlineto{\pgfqpoint{0.682cm}{0.726cm}}
\pgfpathlineto{\pgfqpoint{1.246cm}{1.476cm}}
\pgfusepath{stroke}
\end{pgfscope}
\definecolor{eps2pgf_color}{gray}{0}\pgfsetstrokecolor{eps2pgf_color}\pgfsetfillcolor{eps2pgf_color}
\pgfpathmoveto{\pgfqpoint{0.273cm}{1.451cm}}
\pgfpathcurveto{\pgfqpoint{0.273cm}{1.487cm}}{\pgfqpoint{0.259cm}{1.522cm}}{\pgfqpoint{0.233cm}{1.547cm}}
\pgfpathcurveto{\pgfqpoint{0.207cm}{1.573cm}}{\pgfqpoint{0.173cm}{1.588cm}}{\pgfqpoint{0.137cm}{1.588cm}}
\pgfpathcurveto{\pgfqpoint{0.1cm}{1.588cm}}{\pgfqpoint{0.066cm}{1.573cm}}{\pgfqpoint{0.04cm}{1.547cm}}
\pgfpathcurveto{\pgfqpoint{0.014cm}{1.522cm}}{\pgfqpoint{0cm}{1.487cm}}{\pgfqpoint{0cm}{1.451cm}}
\pgfpathcurveto{\pgfqpoint{0cm}{1.414cm}}{\pgfqpoint{0.014cm}{1.379cm}}{\pgfqpoint{0.04cm}{1.354cm}}
\pgfpathcurveto{\pgfqpoint{0.066cm}{1.328cm}}{\pgfqpoint{0.1cm}{1.314cm}}{\pgfqpoint{0.137cm}{1.314cm}}
\pgfpathcurveto{\pgfqpoint{0.173cm}{1.314cm}}{\pgfqpoint{0.207cm}{1.328cm}}{\pgfqpoint{0.233cm}{1.354cm}}
\pgfpathcurveto{\pgfqpoint{0.259cm}{1.379cm}}{\pgfqpoint{0.273cm}{1.414cm}}{\pgfqpoint{0.273cm}{1.451cm}}
\pgfusepath{fill}
\pgfpathmoveto{\pgfqpoint{1.345cm}{1.426cm}}
\pgfpathcurveto{\pgfqpoint{1.345cm}{1.463cm}}{\pgfqpoint{1.331cm}{1.497cm}}{\pgfqpoint{1.305cm}{1.523cm}}
\pgfpathcurveto{\pgfqpoint{1.28cm}{1.549cm}}{\pgfqpoint{1.245cm}{1.563cm}}{\pgfqpoint{1.209cm}{1.563cm}}
\pgfpathcurveto{\pgfqpoint{1.172cm}{1.563cm}}{\pgfqpoint{1.138cm}{1.549cm}}{\pgfqpoint{1.112cm}{1.523cm}}
\pgfpathcurveto{\pgfqpoint{1.087cm}{1.497cm}}{\pgfqpoint{1.072cm}{1.463cm}}{\pgfqpoint{1.072cm}{1.426cm}}
\pgfpathcurveto{\pgfqpoint{1.072cm}{1.39cm}}{\pgfqpoint{1.087cm}{1.355cm}}{\pgfqpoint{1.112cm}{1.329cm}}
\pgfpathcurveto{\pgfqpoint{1.138cm}{1.304cm}}{\pgfqpoint{1.172cm}{1.289cm}}{\pgfqpoint{1.209cm}{1.289cm}}
\pgfpathcurveto{\pgfqpoint{1.245cm}{1.289cm}}{\pgfqpoint{1.28cm}{1.304cm}}{\pgfqpoint{1.305cm}{1.329cm}}
\pgfpathcurveto{\pgfqpoint{1.331cm}{1.355cm}}{\pgfqpoint{1.345cm}{1.39cm}}{\pgfqpoint{1.345cm}{1.426cm}}
\pgfusepath{fill}
\begin{pgfscope}
\pgfsetdash{}{0cm}
\pgfsetlinewidth{0.818mm}
\pgfsetroundcap
\pgfsetmiterlimit{4.0}
\pgfpathmoveto{\pgfqpoint{0.682cm}{0.726cm}}
\pgfpathlineto{\pgfqpoint{0.682cm}{0.097cm}}
\pgfusepath{stroke}
\end{pgfscope}
\end{pgfscope}
\end{pgfscope}
\end{pgfscope}
\end{tikzpicture}}}
   \succ \phi_{M, \varepsilon} ) + 3\lambda b_{M, \varepsilon} \phi_{M,
   \varepsilon} + 3 \llbracket X_{M, \varepsilon}^2 \rrbracket \circ \chi_{M,
   \varepsilon} \]
\[ = - \lambda\tilde{X}_{M, \varepsilon}^{\!\resizebox{!}{.8em}{
\begin{tikzpicture}
\pgfpathmoveto{\pgfqpoint{0cm}{-0.035cm}}
\pgfpathlineto{\pgfqpoint{1.976cm}{-0.035cm}}
\pgfpathlineto{\pgfqpoint{1.976cm}{1.94cm}}
\pgfpathlineto{\pgfqpoint{0cm}{1.94cm}}
\pgfpathclose
\pgfusepath{clip}
\begin{pgfscope}
\begin{pgfscope}
\pgfpathmoveto{\pgfqpoint{0cm}{-0.035cm}}
\pgfpathlineto{\pgfqpoint{1.976cm}{-0.035cm}}
\pgfpathlineto{\pgfqpoint{1.976cm}{1.94cm}}
\pgfpathlineto{\pgfqpoint{0cm}{1.94cm}}
\pgfpathclose
\pgfusepath{clip}
\begin{pgfscope}
\begin{pgfscope}
\pgfsetdash{}{0cm}
\pgfsetlinewidth{0.818mm}
\pgfsetroundcap
\pgfsetroundjoin
\pgfsetmiterlimit{7.0}
\definecolor{eps2pgf_color}{gray}{0}\pgfsetstrokecolor{eps2pgf_color}\pgfsetfillcolor{eps2pgf_color}
\pgfpathmoveto{\pgfqpoint{0.117cm}{1.815cm}}
\pgfpathlineto{\pgfqpoint{0.682cm}{1.065cm}}
\pgfpathlineto{\pgfqpoint{1.246cm}{1.815cm}}
\pgfusepath{stroke}
\end{pgfscope}
\definecolor{eps2pgf_color}{gray}{0}\pgfsetstrokecolor{eps2pgf_color}\pgfsetfillcolor{eps2pgf_color}
\pgfpathmoveto{\pgfqpoint{0.273cm}{1.789cm}}
\pgfpathcurveto{\pgfqpoint{0.273cm}{1.825cm}}{\pgfqpoint{0.259cm}{1.86cm}}{\pgfqpoint{0.233cm}{1.886cm}}
\pgfpathcurveto{\pgfqpoint{0.207cm}{1.912cm}}{\pgfqpoint{0.173cm}{1.926cm}}{\pgfqpoint{0.137cm}{1.926cm}}
\pgfpathcurveto{\pgfqpoint{0.1cm}{1.926cm}}{\pgfqpoint{0.066cm}{1.912cm}}{\pgfqpoint{0.04cm}{1.886cm}}
\pgfpathcurveto{\pgfqpoint{0.014cm}{1.86cm}}{\pgfqpoint{0cm}{1.825cm}}{\pgfqpoint{0cm}{1.789cm}}
\pgfpathcurveto{\pgfqpoint{0cm}{1.753cm}}{\pgfqpoint{0.014cm}{1.718cm}}{\pgfqpoint{0.04cm}{1.692cm}}
\pgfpathcurveto{\pgfqpoint{0.066cm}{1.667cm}}{\pgfqpoint{0.1cm}{1.652cm}}{\pgfqpoint{0.137cm}{1.652cm}}
\pgfpathcurveto{\pgfqpoint{0.173cm}{1.652cm}}{\pgfqpoint{0.207cm}{1.667cm}}{\pgfqpoint{0.233cm}{1.692cm}}
\pgfpathcurveto{\pgfqpoint{0.259cm}{1.718cm}}{\pgfqpoint{0.273cm}{1.753cm}}{\pgfqpoint{0.273cm}{1.789cm}}
\pgfusepath{fill}
\pgfpathmoveto{\pgfqpoint{1.345cm}{1.765cm}}
\pgfpathcurveto{\pgfqpoint{1.345cm}{1.801cm}}{\pgfqpoint{1.331cm}{1.836cm}}{\pgfqpoint{1.305cm}{1.862cm}}
\pgfpathcurveto{\pgfqpoint{1.28cm}{1.887cm}}{\pgfqpoint{1.245cm}{1.902cm}}{\pgfqpoint{1.209cm}{1.902cm}}
\pgfpathcurveto{\pgfqpoint{1.172cm}{1.902cm}}{\pgfqpoint{1.138cm}{1.887cm}}{\pgfqpoint{1.112cm}{1.862cm}}
\pgfpathcurveto{\pgfqpoint{1.087cm}{1.836cm}}{\pgfqpoint{1.072cm}{1.801cm}}{\pgfqpoint{1.072cm}{1.765cm}}
\pgfpathcurveto{\pgfqpoint{1.072cm}{1.728cm}}{\pgfqpoint{1.087cm}{1.694cm}}{\pgfqpoint{1.112cm}{1.668cm}}
\pgfpathcurveto{\pgfqpoint{1.138cm}{1.642cm}}{\pgfqpoint{1.172cm}{1.628cm}}{\pgfqpoint{1.209cm}{1.628cm}}
\pgfpathcurveto{\pgfqpoint{1.245cm}{1.628cm}}{\pgfqpoint{1.28cm}{1.642cm}}{\pgfqpoint{1.305cm}{1.668cm}}
\pgfpathcurveto{\pgfqpoint{1.331cm}{1.694cm}}{\pgfqpoint{1.345cm}{1.728cm}}{\pgfqpoint{1.345cm}{1.765cm}}
\pgfusepath{fill}
\begin{pgfscope}
\pgfsetdash{}{0cm}
\pgfsetlinewidth{0.818mm}
\pgfsetroundcap
\pgfsetroundjoin
\pgfsetmiterlimit{7.0}
\pgfpathmoveto{\pgfqpoint{0.682cm}{1.065cm}}
\pgfpathlineto{\pgfqpoint{1.246cm}{0.315cm}}
\pgfpathlineto{\pgfqpoint{1.811cm}{1.065cm}}
\pgfusepath{stroke}
\end{pgfscope}
\pgfpathmoveto{\pgfqpoint{1.948cm}{1.065cm}}
\pgfpathcurveto{\pgfqpoint{1.948cm}{1.101cm}}{\pgfqpoint{1.933cm}{1.136cm}}{\pgfqpoint{1.907cm}{1.162cm}}
\pgfpathcurveto{\pgfqpoint{1.882cm}{1.187cm}}{\pgfqpoint{1.847cm}{1.202cm}}{\pgfqpoint{1.811cm}{1.202cm}}
\pgfpathcurveto{\pgfqpoint{1.775cm}{1.202cm}}{\pgfqpoint{1.74cm}{1.187cm}}{\pgfqpoint{1.714cm}{1.162cm}}
\pgfpathcurveto{\pgfqpoint{1.689cm}{1.136cm}}{\pgfqpoint{1.674cm}{1.101cm}}{\pgfqpoint{1.674cm}{1.065cm}}
\pgfpathcurveto{\pgfqpoint{1.674cm}{1.029cm}}{\pgfqpoint{1.689cm}{0.994cm}}{\pgfqpoint{1.714cm}{0.968cm}}
\pgfpathcurveto{\pgfqpoint{1.74cm}{0.942cm}}{\pgfqpoint{1.775cm}{0.928cm}}{\pgfqpoint{1.811cm}{0.928cm}}
\pgfpathcurveto{\pgfqpoint{1.847cm}{0.928cm}}{\pgfqpoint{1.882cm}{0.942cm}}{\pgfqpoint{1.907cm}{0.968cm}}
\pgfpathcurveto{\pgfqpoint{1.933cm}{0.994cm}}{\pgfqpoint{1.948cm}{1.029cm}}{\pgfqpoint{1.948cm}{1.065cm}}
\pgfusepath{fill}
\begin{pgfscope}
\pgfsetdash{}{0cm}
\pgfsetlinewidth{0.818mm}
\pgfsetmiterlimit{7.0}
\pgfpathmoveto{\pgfqpoint{1.246cm}{0.315cm}}
\pgfpathlineto{\pgfqpoint{1.244cm}{1.061cm}}
\pgfusepath{stroke}
\end{pgfscope}
\pgfpathmoveto{\pgfqpoint{1.38cm}{1.065cm}}
\pgfpathcurveto{\pgfqpoint{1.38cm}{1.101cm}}{\pgfqpoint{1.366cm}{1.136cm}}{\pgfqpoint{1.34cm}{1.162cm}}
\pgfpathcurveto{\pgfqpoint{1.315cm}{1.187cm}}{\pgfqpoint{1.28cm}{1.202cm}}{\pgfqpoint{1.244cm}{1.202cm}}
\pgfpathcurveto{\pgfqpoint{1.207cm}{1.202cm}}{\pgfqpoint{1.173cm}{1.187cm}}{\pgfqpoint{1.147cm}{1.162cm}}
\pgfpathcurveto{\pgfqpoint{1.121cm}{1.136cm}}{\pgfqpoint{1.107cm}{1.101cm}}{\pgfqpoint{1.107cm}{1.065cm}}
\pgfpathcurveto{\pgfqpoint{1.107cm}{1.029cm}}{\pgfqpoint{1.121cm}{0.994cm}}{\pgfqpoint{1.147cm}{0.968cm}}
\pgfpathcurveto{\pgfqpoint{1.173cm}{0.942cm}}{\pgfqpoint{1.207cm}{0.928cm}}{\pgfqpoint{1.244cm}{0.928cm}}
\pgfpathcurveto{\pgfqpoint{1.28cm}{0.928cm}}{\pgfqpoint{1.315cm}{0.942cm}}{\pgfqpoint{1.34cm}{0.968cm}}
\pgfpathcurveto{\pgfqpoint{1.366cm}{0.994cm}}{\pgfqpoint{1.38cm}{1.029cm}}{\pgfqpoint{1.38cm}{1.065cm}}
\pgfusepath{fill}
\begin{pgfscope}
\pgfsetdash{}{0cm}
\pgfsetlinewidth{0.818mm}
\pgfsetmiterlimit{4.0}
\pgfpathmoveto{\pgfqpoint{1.383cm}{0.178cm}}
\pgfpathcurveto{\pgfqpoint{1.383cm}{0.214cm}}{\pgfqpoint{1.369cm}{0.249cm}}{\pgfqpoint{1.343cm}{0.275cm}}
\pgfpathcurveto{\pgfqpoint{1.317cm}{0.3cm}}{\pgfqpoint{1.283cm}{0.315cm}}{\pgfqpoint{1.246cm}{0.315cm}}
\pgfpathcurveto{\pgfqpoint{1.21cm}{0.315cm}}{\pgfqpoint{1.175cm}{0.3cm}}{\pgfqpoint{1.15cm}{0.275cm}}
\pgfpathcurveto{\pgfqpoint{1.124cm}{0.249cm}}{\pgfqpoint{1.11cm}{0.214cm}}{\pgfqpoint{1.11cm}{0.178cm}}
\pgfpathcurveto{\pgfqpoint{1.11cm}{0.141cm}}{\pgfqpoint{1.124cm}{0.107cm}}{\pgfqpoint{1.15cm}{0.081cm}}
\pgfpathcurveto{\pgfqpoint{1.175cm}{0.055cm}}{\pgfqpoint{1.21cm}{0.041cm}}{\pgfqpoint{1.246cm}{0.041cm}}
\pgfpathcurveto{\pgfqpoint{1.283cm}{0.041cm}}{\pgfqpoint{1.317cm}{0.055cm}}{\pgfqpoint{1.343cm}{0.081cm}}
\pgfpathcurveto{\pgfqpoint{1.369cm}{0.107cm}}{\pgfqpoint{1.383cm}{0.141cm}}{\pgfqpoint{1.383cm}{0.178cm}}
\pgfusepath{stroke}
\end{pgfscope}
\end{pgfscope}
\end{pgfscope}
\end{pgfscope}
\end{tikzpicture}}} \phi_{M, \varepsilon} + 3\lambda
   (b_{M, \varepsilon} - \tilde{b}_{M, \varepsilon} (t)) \phi_{M, \varepsilon}
   +\lambda C_{\varepsilon} (\phi_{M, \varepsilon}, - 3 X_{M, \varepsilon}^{\!\resizebox{0.6em}{!}{
\begin{tikzpicture}
\pgfpathmoveto{\pgfqpoint{0cm}{0cm}}
\pgfpathlineto{\pgfqpoint{1.376cm}{0cm}}
\pgfpathlineto{\pgfqpoint{1.376cm}{1.588cm}}
\pgfpathlineto{\pgfqpoint{0cm}{1.588cm}}
\pgfpathclose
\pgfusepath{clip}
\begin{pgfscope}
\begin{pgfscope}
\pgfpathmoveto{\pgfqpoint{0cm}{0cm}}
\pgfpathlineto{\pgfqpoint{1.376cm}{0cm}}
\pgfpathlineto{\pgfqpoint{1.376cm}{1.588cm}}
\pgfpathlineto{\pgfqpoint{0cm}{1.588cm}}
\pgfpathclose
\pgfusepath{clip}
\begin{pgfscope}
\begin{pgfscope}
\definecolor{eps2pgf_color}{gray}{0.976471}\pgfsetstrokecolor{eps2pgf_color}\pgfsetfillcolor{eps2pgf_color}
\pgfpathmoveto{\pgfqpoint{0cm}{0cm}}
\pgfpathlineto{\pgfqpoint{1.376cm}{0cm}}
\pgfpathlineto{\pgfqpoint{1.376cm}{1.588cm}}
\pgfpathlineto{\pgfqpoint{0cm}{1.588cm}}
\pgfpathclose
\pgfusepath{fill}
\end{pgfscope}
\begin{pgfscope}
\pgfsetdash{}{0cm}
\pgfsetlinewidth{0.818mm}
\pgfsetroundcap
\pgfsetroundjoin
\pgfsetmiterlimit{7.0}
\definecolor{eps2pgf_color}{gray}{0}\pgfsetstrokecolor{eps2pgf_color}\pgfsetfillcolor{eps2pgf_color}
\pgfpathmoveto{\pgfqpoint{0.117cm}{1.476cm}}
\pgfpathlineto{\pgfqpoint{0.682cm}{0.726cm}}
\pgfpathlineto{\pgfqpoint{1.246cm}{1.476cm}}
\pgfusepath{stroke}
\end{pgfscope}
\definecolor{eps2pgf_color}{gray}{0}\pgfsetstrokecolor{eps2pgf_color}\pgfsetfillcolor{eps2pgf_color}
\pgfpathmoveto{\pgfqpoint{0.273cm}{1.451cm}}
\pgfpathcurveto{\pgfqpoint{0.273cm}{1.487cm}}{\pgfqpoint{0.259cm}{1.522cm}}{\pgfqpoint{0.233cm}{1.547cm}}
\pgfpathcurveto{\pgfqpoint{0.207cm}{1.573cm}}{\pgfqpoint{0.173cm}{1.588cm}}{\pgfqpoint{0.137cm}{1.588cm}}
\pgfpathcurveto{\pgfqpoint{0.1cm}{1.588cm}}{\pgfqpoint{0.066cm}{1.573cm}}{\pgfqpoint{0.04cm}{1.547cm}}
\pgfpathcurveto{\pgfqpoint{0.014cm}{1.522cm}}{\pgfqpoint{0cm}{1.487cm}}{\pgfqpoint{0cm}{1.451cm}}
\pgfpathcurveto{\pgfqpoint{0cm}{1.414cm}}{\pgfqpoint{0.014cm}{1.379cm}}{\pgfqpoint{0.04cm}{1.354cm}}
\pgfpathcurveto{\pgfqpoint{0.066cm}{1.328cm}}{\pgfqpoint{0.1cm}{1.314cm}}{\pgfqpoint{0.137cm}{1.314cm}}
\pgfpathcurveto{\pgfqpoint{0.173cm}{1.314cm}}{\pgfqpoint{0.207cm}{1.328cm}}{\pgfqpoint{0.233cm}{1.354cm}}
\pgfpathcurveto{\pgfqpoint{0.259cm}{1.379cm}}{\pgfqpoint{0.273cm}{1.414cm}}{\pgfqpoint{0.273cm}{1.451cm}}
\pgfusepath{fill}
\pgfpathmoveto{\pgfqpoint{1.345cm}{1.426cm}}
\pgfpathcurveto{\pgfqpoint{1.345cm}{1.463cm}}{\pgfqpoint{1.331cm}{1.497cm}}{\pgfqpoint{1.305cm}{1.523cm}}
\pgfpathcurveto{\pgfqpoint{1.28cm}{1.549cm}}{\pgfqpoint{1.245cm}{1.563cm}}{\pgfqpoint{1.209cm}{1.563cm}}
\pgfpathcurveto{\pgfqpoint{1.172cm}{1.563cm}}{\pgfqpoint{1.138cm}{1.549cm}}{\pgfqpoint{1.112cm}{1.523cm}}
\pgfpathcurveto{\pgfqpoint{1.087cm}{1.497cm}}{\pgfqpoint{1.072cm}{1.463cm}}{\pgfqpoint{1.072cm}{1.426cm}}
\pgfpathcurveto{\pgfqpoint{1.072cm}{1.39cm}}{\pgfqpoint{1.087cm}{1.355cm}}{\pgfqpoint{1.112cm}{1.329cm}}
\pgfpathcurveto{\pgfqpoint{1.138cm}{1.304cm}}{\pgfqpoint{1.172cm}{1.289cm}}{\pgfqpoint{1.209cm}{1.289cm}}
\pgfpathcurveto{\pgfqpoint{1.245cm}{1.289cm}}{\pgfqpoint{1.28cm}{1.304cm}}{\pgfqpoint{1.305cm}{1.329cm}}
\pgfpathcurveto{\pgfqpoint{1.331cm}{1.355cm}}{\pgfqpoint{1.345cm}{1.39cm}}{\pgfqpoint{1.345cm}{1.426cm}}
\pgfusepath{fill}
\begin{pgfscope}
\pgfsetdash{}{0cm}
\pgfsetlinewidth{0.818mm}
\pgfsetroundcap
\pgfsetmiterlimit{4.0}
\pgfpathmoveto{\pgfqpoint{0.682cm}{0.726cm}}
\pgfpathlineto{\pgfqpoint{0.682cm}{0.097cm}}
\pgfusepath{stroke}
\end{pgfscope}
\end{pgfscope}
\end{pgfscope}
\end{pgfscope}
\end{tikzpicture}}},
   3 \llbracket X_{M, \varepsilon}^2 \rrbracket) + 3 \llbracket X_{M,
   \varepsilon}^2 \rrbracket \circ \chi_{M, \varepsilon} . \]
The remaining resonant product that requires a decomposition can be treated as
\[ \begin{array}{ccl}
     3 X_{M, \varepsilon} \circ (-\lambda X_{M, \varepsilon}^{\!\resizebox{0.6em}{!}{
\begin{tikzpicture}
\pgfpathmoveto{\pgfqpoint{0cm}{-0.035cm}}
\pgfpathlineto{\pgfqpoint{1.376cm}{-0.035cm}}
\pgfpathlineto{\pgfqpoint{1.376cm}{1.552cm}}
\pgfpathlineto{\pgfqpoint{0cm}{1.552cm}}
\pgfpathclose
\pgfusepath{clip}
\begin{pgfscope}
\begin{pgfscope}
\pgfpathmoveto{\pgfqpoint{0cm}{-0.035cm}}
\pgfpathlineto{\pgfqpoint{1.376cm}{-0.035cm}}
\pgfpathlineto{\pgfqpoint{1.376cm}{1.552cm}}
\pgfpathlineto{\pgfqpoint{0cm}{1.552cm}}
\pgfpathclose
\pgfusepath{clip}
\begin{pgfscope}
\begin{pgfscope}
\pgfsetdash{}{0cm}
\pgfsetlinewidth{0.818mm}
\pgfsetroundcap
\pgfsetroundjoin
\pgfsetmiterlimit{7.0}
\definecolor{eps2pgf_color}{gray}{0}\pgfsetstrokecolor{eps2pgf_color}\pgfsetfillcolor{eps2pgf_color}
\pgfpathmoveto{\pgfqpoint{0.117cm}{1.421cm}}
\pgfpathlineto{\pgfqpoint{0.682cm}{0.671cm}}
\pgfpathlineto{\pgfqpoint{1.246cm}{1.421cm}}
\pgfusepath{stroke}
\end{pgfscope}
\definecolor{eps2pgf_color}{gray}{0}\pgfsetstrokecolor{eps2pgf_color}\pgfsetfillcolor{eps2pgf_color}
\pgfpathmoveto{\pgfqpoint{0.273cm}{1.395cm}}
\pgfpathcurveto{\pgfqpoint{0.273cm}{1.432cm}}{\pgfqpoint{0.259cm}{1.467cm}}{\pgfqpoint{0.233cm}{1.492cm}}
\pgfpathcurveto{\pgfqpoint{0.207cm}{1.518cm}}{\pgfqpoint{0.173cm}{1.532cm}}{\pgfqpoint{0.137cm}{1.532cm}}
\pgfpathcurveto{\pgfqpoint{0.1cm}{1.532cm}}{\pgfqpoint{0.066cm}{1.518cm}}{\pgfqpoint{0.04cm}{1.492cm}}
\pgfpathcurveto{\pgfqpoint{0.014cm}{1.467cm}}{\pgfqpoint{0cm}{1.432cm}}{\pgfqpoint{0cm}{1.395cm}}
\pgfpathcurveto{\pgfqpoint{0cm}{1.359cm}}{\pgfqpoint{0.014cm}{1.324cm}}{\pgfqpoint{0.04cm}{1.299cm}}
\pgfpathcurveto{\pgfqpoint{0.066cm}{1.273cm}}{\pgfqpoint{0.1cm}{1.258cm}}{\pgfqpoint{0.137cm}{1.258cm}}
\pgfpathcurveto{\pgfqpoint{0.173cm}{1.258cm}}{\pgfqpoint{0.207cm}{1.273cm}}{\pgfqpoint{0.233cm}{1.299cm}}
\pgfpathcurveto{\pgfqpoint{0.259cm}{1.324cm}}{\pgfqpoint{0.273cm}{1.359cm}}{\pgfqpoint{0.273cm}{1.395cm}}
\pgfusepath{fill}
\begin{pgfscope}
\pgfsetdash{}{0cm}
\pgfsetlinewidth{0.818mm}
\pgfsetmiterlimit{7.0}
\pgfpathmoveto{\pgfqpoint{0.682cm}{0.671cm}}
\pgfpathlineto{\pgfqpoint{0.679cm}{1.418cm}}
\pgfusepath{stroke}
\end{pgfscope}
\pgfpathmoveto{\pgfqpoint{0.815cm}{1.399cm}}
\pgfpathcurveto{\pgfqpoint{0.815cm}{1.435cm}}{\pgfqpoint{0.801cm}{1.47cm}}{\pgfqpoint{0.775cm}{1.496cm}}
\pgfpathcurveto{\pgfqpoint{0.75cm}{1.521cm}}{\pgfqpoint{0.715cm}{1.536cm}}{\pgfqpoint{0.679cm}{1.536cm}}
\pgfpathcurveto{\pgfqpoint{0.643cm}{1.536cm}}{\pgfqpoint{0.608cm}{1.521cm}}{\pgfqpoint{0.582cm}{1.496cm}}
\pgfpathcurveto{\pgfqpoint{0.557cm}{1.47cm}}{\pgfqpoint{0.542cm}{1.435cm}}{\pgfqpoint{0.542cm}{1.399cm}}
\pgfpathcurveto{\pgfqpoint{0.542cm}{1.363cm}}{\pgfqpoint{0.557cm}{1.328cm}}{\pgfqpoint{0.582cm}{1.302cm}}
\pgfpathcurveto{\pgfqpoint{0.608cm}{1.276cm}}{\pgfqpoint{0.643cm}{1.262cm}}{\pgfqpoint{0.679cm}{1.262cm}}
\pgfpathcurveto{\pgfqpoint{0.715cm}{1.262cm}}{\pgfqpoint{0.75cm}{1.276cm}}{\pgfqpoint{0.775cm}{1.302cm}}
\pgfpathcurveto{\pgfqpoint{0.801cm}{1.328cm}}{\pgfqpoint{0.815cm}{1.363cm}}{\pgfqpoint{0.815cm}{1.399cm}}
\pgfusepath{fill}
\pgfpathmoveto{\pgfqpoint{1.345cm}{1.371cm}}
\pgfpathcurveto{\pgfqpoint{1.345cm}{1.408cm}}{\pgfqpoint{1.331cm}{1.442cm}}{\pgfqpoint{1.305cm}{1.468cm}}
\pgfpathcurveto{\pgfqpoint{1.28cm}{1.494cm}}{\pgfqpoint{1.245cm}{1.508cm}}{\pgfqpoint{1.209cm}{1.508cm}}
\pgfpathcurveto{\pgfqpoint{1.172cm}{1.508cm}}{\pgfqpoint{1.138cm}{1.494cm}}{\pgfqpoint{1.112cm}{1.468cm}}
\pgfpathcurveto{\pgfqpoint{1.087cm}{1.442cm}}{\pgfqpoint{1.072cm}{1.408cm}}{\pgfqpoint{1.072cm}{1.371cm}}
\pgfpathcurveto{\pgfqpoint{1.072cm}{1.335cm}}{\pgfqpoint{1.087cm}{1.3cm}}{\pgfqpoint{1.112cm}{1.274cm}}
\pgfpathcurveto{\pgfqpoint{1.138cm}{1.249cm}}{\pgfqpoint{1.172cm}{1.234cm}}{\pgfqpoint{1.209cm}{1.234cm}}
\pgfpathcurveto{\pgfqpoint{1.245cm}{1.234cm}}{\pgfqpoint{1.28cm}{1.249cm}}{\pgfqpoint{1.305cm}{1.274cm}}
\pgfpathcurveto{\pgfqpoint{1.331cm}{1.3cm}}{\pgfqpoint{1.345cm}{1.335cm}}{\pgfqpoint{1.345cm}{1.371cm}}
\pgfusepath{fill}
\begin{pgfscope}
\pgfsetdash{}{0cm}
\pgfsetlinewidth{0.818mm}
\pgfsetroundcap
\pgfsetmiterlimit{4.0}
\pgfpathmoveto{\pgfqpoint{0.682cm}{0.671cm}}
\pgfpathlineto{\pgfqpoint{0.682cm}{0.042cm}}
\pgfusepath{stroke}
\end{pgfscope}
\end{pgfscope}
\end{pgfscope}
\end{pgfscope}
\end{tikzpicture}}} + \zeta_{M,
     \varepsilon})^2 & = & 3 \lambda^2 X_{M, \varepsilon} \circ (X_{M,
     \varepsilon}^{\!\resizebox{0.6em}{!}{
\begin{tikzpicture}
\pgfpathmoveto{\pgfqpoint{0cm}{-0.035cm}}
\pgfpathlineto{\pgfqpoint{1.376cm}{-0.035cm}}
\pgfpathlineto{\pgfqpoint{1.376cm}{1.552cm}}
\pgfpathlineto{\pgfqpoint{0cm}{1.552cm}}
\pgfpathclose
\pgfusepath{clip}
\begin{pgfscope}
\begin{pgfscope}
\pgfpathmoveto{\pgfqpoint{0cm}{-0.035cm}}
\pgfpathlineto{\pgfqpoint{1.376cm}{-0.035cm}}
\pgfpathlineto{\pgfqpoint{1.376cm}{1.552cm}}
\pgfpathlineto{\pgfqpoint{0cm}{1.552cm}}
\pgfpathclose
\pgfusepath{clip}
\begin{pgfscope}
\begin{pgfscope}
\pgfsetdash{}{0cm}
\pgfsetlinewidth{0.818mm}
\pgfsetroundcap
\pgfsetroundjoin
\pgfsetmiterlimit{7.0}
\definecolor{eps2pgf_color}{gray}{0}\pgfsetstrokecolor{eps2pgf_color}\pgfsetfillcolor{eps2pgf_color}
\pgfpathmoveto{\pgfqpoint{0.117cm}{1.421cm}}
\pgfpathlineto{\pgfqpoint{0.682cm}{0.671cm}}
\pgfpathlineto{\pgfqpoint{1.246cm}{1.421cm}}
\pgfusepath{stroke}
\end{pgfscope}
\definecolor{eps2pgf_color}{gray}{0}\pgfsetstrokecolor{eps2pgf_color}\pgfsetfillcolor{eps2pgf_color}
\pgfpathmoveto{\pgfqpoint{0.273cm}{1.395cm}}
\pgfpathcurveto{\pgfqpoint{0.273cm}{1.432cm}}{\pgfqpoint{0.259cm}{1.467cm}}{\pgfqpoint{0.233cm}{1.492cm}}
\pgfpathcurveto{\pgfqpoint{0.207cm}{1.518cm}}{\pgfqpoint{0.173cm}{1.532cm}}{\pgfqpoint{0.137cm}{1.532cm}}
\pgfpathcurveto{\pgfqpoint{0.1cm}{1.532cm}}{\pgfqpoint{0.066cm}{1.518cm}}{\pgfqpoint{0.04cm}{1.492cm}}
\pgfpathcurveto{\pgfqpoint{0.014cm}{1.467cm}}{\pgfqpoint{0cm}{1.432cm}}{\pgfqpoint{0cm}{1.395cm}}
\pgfpathcurveto{\pgfqpoint{0cm}{1.359cm}}{\pgfqpoint{0.014cm}{1.324cm}}{\pgfqpoint{0.04cm}{1.299cm}}
\pgfpathcurveto{\pgfqpoint{0.066cm}{1.273cm}}{\pgfqpoint{0.1cm}{1.258cm}}{\pgfqpoint{0.137cm}{1.258cm}}
\pgfpathcurveto{\pgfqpoint{0.173cm}{1.258cm}}{\pgfqpoint{0.207cm}{1.273cm}}{\pgfqpoint{0.233cm}{1.299cm}}
\pgfpathcurveto{\pgfqpoint{0.259cm}{1.324cm}}{\pgfqpoint{0.273cm}{1.359cm}}{\pgfqpoint{0.273cm}{1.395cm}}
\pgfusepath{fill}
\begin{pgfscope}
\pgfsetdash{}{0cm}
\pgfsetlinewidth{0.818mm}
\pgfsetmiterlimit{7.0}
\pgfpathmoveto{\pgfqpoint{0.682cm}{0.671cm}}
\pgfpathlineto{\pgfqpoint{0.679cm}{1.418cm}}
\pgfusepath{stroke}
\end{pgfscope}
\pgfpathmoveto{\pgfqpoint{0.815cm}{1.399cm}}
\pgfpathcurveto{\pgfqpoint{0.815cm}{1.435cm}}{\pgfqpoint{0.801cm}{1.47cm}}{\pgfqpoint{0.775cm}{1.496cm}}
\pgfpathcurveto{\pgfqpoint{0.75cm}{1.521cm}}{\pgfqpoint{0.715cm}{1.536cm}}{\pgfqpoint{0.679cm}{1.536cm}}
\pgfpathcurveto{\pgfqpoint{0.643cm}{1.536cm}}{\pgfqpoint{0.608cm}{1.521cm}}{\pgfqpoint{0.582cm}{1.496cm}}
\pgfpathcurveto{\pgfqpoint{0.557cm}{1.47cm}}{\pgfqpoint{0.542cm}{1.435cm}}{\pgfqpoint{0.542cm}{1.399cm}}
\pgfpathcurveto{\pgfqpoint{0.542cm}{1.363cm}}{\pgfqpoint{0.557cm}{1.328cm}}{\pgfqpoint{0.582cm}{1.302cm}}
\pgfpathcurveto{\pgfqpoint{0.608cm}{1.276cm}}{\pgfqpoint{0.643cm}{1.262cm}}{\pgfqpoint{0.679cm}{1.262cm}}
\pgfpathcurveto{\pgfqpoint{0.715cm}{1.262cm}}{\pgfqpoint{0.75cm}{1.276cm}}{\pgfqpoint{0.775cm}{1.302cm}}
\pgfpathcurveto{\pgfqpoint{0.801cm}{1.328cm}}{\pgfqpoint{0.815cm}{1.363cm}}{\pgfqpoint{0.815cm}{1.399cm}}
\pgfusepath{fill}
\pgfpathmoveto{\pgfqpoint{1.345cm}{1.371cm}}
\pgfpathcurveto{\pgfqpoint{1.345cm}{1.408cm}}{\pgfqpoint{1.331cm}{1.442cm}}{\pgfqpoint{1.305cm}{1.468cm}}
\pgfpathcurveto{\pgfqpoint{1.28cm}{1.494cm}}{\pgfqpoint{1.245cm}{1.508cm}}{\pgfqpoint{1.209cm}{1.508cm}}
\pgfpathcurveto{\pgfqpoint{1.172cm}{1.508cm}}{\pgfqpoint{1.138cm}{1.494cm}}{\pgfqpoint{1.112cm}{1.468cm}}
\pgfpathcurveto{\pgfqpoint{1.087cm}{1.442cm}}{\pgfqpoint{1.072cm}{1.408cm}}{\pgfqpoint{1.072cm}{1.371cm}}
\pgfpathcurveto{\pgfqpoint{1.072cm}{1.335cm}}{\pgfqpoint{1.087cm}{1.3cm}}{\pgfqpoint{1.112cm}{1.274cm}}
\pgfpathcurveto{\pgfqpoint{1.138cm}{1.249cm}}{\pgfqpoint{1.172cm}{1.234cm}}{\pgfqpoint{1.209cm}{1.234cm}}
\pgfpathcurveto{\pgfqpoint{1.245cm}{1.234cm}}{\pgfqpoint{1.28cm}{1.249cm}}{\pgfqpoint{1.305cm}{1.274cm}}
\pgfpathcurveto{\pgfqpoint{1.331cm}{1.3cm}}{\pgfqpoint{1.345cm}{1.335cm}}{\pgfqpoint{1.345cm}{1.371cm}}
\pgfusepath{fill}
\begin{pgfscope}
\pgfsetdash{}{0cm}
\pgfsetlinewidth{0.818mm}
\pgfsetroundcap
\pgfsetmiterlimit{4.0}
\pgfpathmoveto{\pgfqpoint{0.682cm}{0.671cm}}
\pgfpathlineto{\pgfqpoint{0.682cm}{0.042cm}}
\pgfusepath{stroke}
\end{pgfscope}
\end{pgfscope}
\end{pgfscope}
\end{pgfscope}
\end{tikzpicture}}})^2 - 6\lambda X_{M, \varepsilon} \circ ( X^{\!\resizebox{0.6em}{!}{
\begin{tikzpicture}
\pgfpathmoveto{\pgfqpoint{0cm}{-0.035cm}}
\pgfpathlineto{\pgfqpoint{1.376cm}{-0.035cm}}
\pgfpathlineto{\pgfqpoint{1.376cm}{1.552cm}}
\pgfpathlineto{\pgfqpoint{0cm}{1.552cm}}
\pgfpathclose
\pgfusepath{clip}
\begin{pgfscope}
\begin{pgfscope}
\pgfpathmoveto{\pgfqpoint{0cm}{-0.035cm}}
\pgfpathlineto{\pgfqpoint{1.376cm}{-0.035cm}}
\pgfpathlineto{\pgfqpoint{1.376cm}{1.552cm}}
\pgfpathlineto{\pgfqpoint{0cm}{1.552cm}}
\pgfpathclose
\pgfusepath{clip}
\begin{pgfscope}
\begin{pgfscope}
\pgfsetdash{}{0cm}
\pgfsetlinewidth{0.818mm}
\pgfsetroundcap
\pgfsetroundjoin
\pgfsetmiterlimit{7.0}
\definecolor{eps2pgf_color}{gray}{0}\pgfsetstrokecolor{eps2pgf_color}\pgfsetfillcolor{eps2pgf_color}
\pgfpathmoveto{\pgfqpoint{0.117cm}{1.421cm}}
\pgfpathlineto{\pgfqpoint{0.682cm}{0.671cm}}
\pgfpathlineto{\pgfqpoint{1.246cm}{1.421cm}}
\pgfusepath{stroke}
\end{pgfscope}
\definecolor{eps2pgf_color}{gray}{0}\pgfsetstrokecolor{eps2pgf_color}\pgfsetfillcolor{eps2pgf_color}
\pgfpathmoveto{\pgfqpoint{0.273cm}{1.395cm}}
\pgfpathcurveto{\pgfqpoint{0.273cm}{1.432cm}}{\pgfqpoint{0.259cm}{1.467cm}}{\pgfqpoint{0.233cm}{1.492cm}}
\pgfpathcurveto{\pgfqpoint{0.207cm}{1.518cm}}{\pgfqpoint{0.173cm}{1.532cm}}{\pgfqpoint{0.137cm}{1.532cm}}
\pgfpathcurveto{\pgfqpoint{0.1cm}{1.532cm}}{\pgfqpoint{0.066cm}{1.518cm}}{\pgfqpoint{0.04cm}{1.492cm}}
\pgfpathcurveto{\pgfqpoint{0.014cm}{1.467cm}}{\pgfqpoint{0cm}{1.432cm}}{\pgfqpoint{0cm}{1.395cm}}
\pgfpathcurveto{\pgfqpoint{0cm}{1.359cm}}{\pgfqpoint{0.014cm}{1.324cm}}{\pgfqpoint{0.04cm}{1.299cm}}
\pgfpathcurveto{\pgfqpoint{0.066cm}{1.273cm}}{\pgfqpoint{0.1cm}{1.258cm}}{\pgfqpoint{0.137cm}{1.258cm}}
\pgfpathcurveto{\pgfqpoint{0.173cm}{1.258cm}}{\pgfqpoint{0.207cm}{1.273cm}}{\pgfqpoint{0.233cm}{1.299cm}}
\pgfpathcurveto{\pgfqpoint{0.259cm}{1.324cm}}{\pgfqpoint{0.273cm}{1.359cm}}{\pgfqpoint{0.273cm}{1.395cm}}
\pgfusepath{fill}
\begin{pgfscope}
\pgfsetdash{}{0cm}
\pgfsetlinewidth{0.818mm}
\pgfsetmiterlimit{7.0}
\pgfpathmoveto{\pgfqpoint{0.682cm}{0.671cm}}
\pgfpathlineto{\pgfqpoint{0.679cm}{1.418cm}}
\pgfusepath{stroke}
\end{pgfscope}
\pgfpathmoveto{\pgfqpoint{0.815cm}{1.399cm}}
\pgfpathcurveto{\pgfqpoint{0.815cm}{1.435cm}}{\pgfqpoint{0.801cm}{1.47cm}}{\pgfqpoint{0.775cm}{1.496cm}}
\pgfpathcurveto{\pgfqpoint{0.75cm}{1.521cm}}{\pgfqpoint{0.715cm}{1.536cm}}{\pgfqpoint{0.679cm}{1.536cm}}
\pgfpathcurveto{\pgfqpoint{0.643cm}{1.536cm}}{\pgfqpoint{0.608cm}{1.521cm}}{\pgfqpoint{0.582cm}{1.496cm}}
\pgfpathcurveto{\pgfqpoint{0.557cm}{1.47cm}}{\pgfqpoint{0.542cm}{1.435cm}}{\pgfqpoint{0.542cm}{1.399cm}}
\pgfpathcurveto{\pgfqpoint{0.542cm}{1.363cm}}{\pgfqpoint{0.557cm}{1.328cm}}{\pgfqpoint{0.582cm}{1.302cm}}
\pgfpathcurveto{\pgfqpoint{0.608cm}{1.276cm}}{\pgfqpoint{0.643cm}{1.262cm}}{\pgfqpoint{0.679cm}{1.262cm}}
\pgfpathcurveto{\pgfqpoint{0.715cm}{1.262cm}}{\pgfqpoint{0.75cm}{1.276cm}}{\pgfqpoint{0.775cm}{1.302cm}}
\pgfpathcurveto{\pgfqpoint{0.801cm}{1.328cm}}{\pgfqpoint{0.815cm}{1.363cm}}{\pgfqpoint{0.815cm}{1.399cm}}
\pgfusepath{fill}
\pgfpathmoveto{\pgfqpoint{1.345cm}{1.371cm}}
\pgfpathcurveto{\pgfqpoint{1.345cm}{1.408cm}}{\pgfqpoint{1.331cm}{1.442cm}}{\pgfqpoint{1.305cm}{1.468cm}}
\pgfpathcurveto{\pgfqpoint{1.28cm}{1.494cm}}{\pgfqpoint{1.245cm}{1.508cm}}{\pgfqpoint{1.209cm}{1.508cm}}
\pgfpathcurveto{\pgfqpoint{1.172cm}{1.508cm}}{\pgfqpoint{1.138cm}{1.494cm}}{\pgfqpoint{1.112cm}{1.468cm}}
\pgfpathcurveto{\pgfqpoint{1.087cm}{1.442cm}}{\pgfqpoint{1.072cm}{1.408cm}}{\pgfqpoint{1.072cm}{1.371cm}}
\pgfpathcurveto{\pgfqpoint{1.072cm}{1.335cm}}{\pgfqpoint{1.087cm}{1.3cm}}{\pgfqpoint{1.112cm}{1.274cm}}
\pgfpathcurveto{\pgfqpoint{1.138cm}{1.249cm}}{\pgfqpoint{1.172cm}{1.234cm}}{\pgfqpoint{1.209cm}{1.234cm}}
\pgfpathcurveto{\pgfqpoint{1.245cm}{1.234cm}}{\pgfqpoint{1.28cm}{1.249cm}}{\pgfqpoint{1.305cm}{1.274cm}}
\pgfpathcurveto{\pgfqpoint{1.331cm}{1.3cm}}{\pgfqpoint{1.345cm}{1.335cm}}{\pgfqpoint{1.345cm}{1.371cm}}
\pgfusepath{fill}
\begin{pgfscope}
\pgfsetdash{}{0cm}
\pgfsetlinewidth{0.818mm}
\pgfsetroundcap
\pgfsetmiterlimit{4.0}
\pgfpathmoveto{\pgfqpoint{0.682cm}{0.671cm}}
\pgfpathlineto{\pgfqpoint{0.682cm}{0.042cm}}
\pgfusepath{stroke}
\end{pgfscope}
\end{pgfscope}
\end{pgfscope}
\end{pgfscope}
\end{tikzpicture}}}_{M,
     \varepsilon} \zeta_{M, \varepsilon} ) + 3 X_{M, \varepsilon} \circ
     \zeta_{M, \varepsilon}^2\\
     & = & 6\lambda^2 X_{M, \varepsilon} \circ (X_{M, \varepsilon}^{\!\resizebox{0.6em}{!}{
\begin{tikzpicture}
\pgfpathmoveto{\pgfqpoint{0cm}{-0.035cm}}
\pgfpathlineto{\pgfqpoint{1.376cm}{-0.035cm}}
\pgfpathlineto{\pgfqpoint{1.376cm}{1.552cm}}
\pgfpathlineto{\pgfqpoint{0cm}{1.552cm}}
\pgfpathclose
\pgfusepath{clip}
\begin{pgfscope}
\begin{pgfscope}
\pgfpathmoveto{\pgfqpoint{0cm}{-0.035cm}}
\pgfpathlineto{\pgfqpoint{1.376cm}{-0.035cm}}
\pgfpathlineto{\pgfqpoint{1.376cm}{1.552cm}}
\pgfpathlineto{\pgfqpoint{0cm}{1.552cm}}
\pgfpathclose
\pgfusepath{clip}
\begin{pgfscope}
\begin{pgfscope}
\pgfsetdash{}{0cm}
\pgfsetlinewidth{0.818mm}
\pgfsetroundcap
\pgfsetroundjoin
\pgfsetmiterlimit{7.0}
\definecolor{eps2pgf_color}{gray}{0}\pgfsetstrokecolor{eps2pgf_color}\pgfsetfillcolor{eps2pgf_color}
\pgfpathmoveto{\pgfqpoint{0.117cm}{1.421cm}}
\pgfpathlineto{\pgfqpoint{0.682cm}{0.671cm}}
\pgfpathlineto{\pgfqpoint{1.246cm}{1.421cm}}
\pgfusepath{stroke}
\end{pgfscope}
\definecolor{eps2pgf_color}{gray}{0}\pgfsetstrokecolor{eps2pgf_color}\pgfsetfillcolor{eps2pgf_color}
\pgfpathmoveto{\pgfqpoint{0.273cm}{1.395cm}}
\pgfpathcurveto{\pgfqpoint{0.273cm}{1.432cm}}{\pgfqpoint{0.259cm}{1.467cm}}{\pgfqpoint{0.233cm}{1.492cm}}
\pgfpathcurveto{\pgfqpoint{0.207cm}{1.518cm}}{\pgfqpoint{0.173cm}{1.532cm}}{\pgfqpoint{0.137cm}{1.532cm}}
\pgfpathcurveto{\pgfqpoint{0.1cm}{1.532cm}}{\pgfqpoint{0.066cm}{1.518cm}}{\pgfqpoint{0.04cm}{1.492cm}}
\pgfpathcurveto{\pgfqpoint{0.014cm}{1.467cm}}{\pgfqpoint{0cm}{1.432cm}}{\pgfqpoint{0cm}{1.395cm}}
\pgfpathcurveto{\pgfqpoint{0cm}{1.359cm}}{\pgfqpoint{0.014cm}{1.324cm}}{\pgfqpoint{0.04cm}{1.299cm}}
\pgfpathcurveto{\pgfqpoint{0.066cm}{1.273cm}}{\pgfqpoint{0.1cm}{1.258cm}}{\pgfqpoint{0.137cm}{1.258cm}}
\pgfpathcurveto{\pgfqpoint{0.173cm}{1.258cm}}{\pgfqpoint{0.207cm}{1.273cm}}{\pgfqpoint{0.233cm}{1.299cm}}
\pgfpathcurveto{\pgfqpoint{0.259cm}{1.324cm}}{\pgfqpoint{0.273cm}{1.359cm}}{\pgfqpoint{0.273cm}{1.395cm}}
\pgfusepath{fill}
\begin{pgfscope}
\pgfsetdash{}{0cm}
\pgfsetlinewidth{0.818mm}
\pgfsetmiterlimit{7.0}
\pgfpathmoveto{\pgfqpoint{0.682cm}{0.671cm}}
\pgfpathlineto{\pgfqpoint{0.679cm}{1.418cm}}
\pgfusepath{stroke}
\end{pgfscope}
\pgfpathmoveto{\pgfqpoint{0.815cm}{1.399cm}}
\pgfpathcurveto{\pgfqpoint{0.815cm}{1.435cm}}{\pgfqpoint{0.801cm}{1.47cm}}{\pgfqpoint{0.775cm}{1.496cm}}
\pgfpathcurveto{\pgfqpoint{0.75cm}{1.521cm}}{\pgfqpoint{0.715cm}{1.536cm}}{\pgfqpoint{0.679cm}{1.536cm}}
\pgfpathcurveto{\pgfqpoint{0.643cm}{1.536cm}}{\pgfqpoint{0.608cm}{1.521cm}}{\pgfqpoint{0.582cm}{1.496cm}}
\pgfpathcurveto{\pgfqpoint{0.557cm}{1.47cm}}{\pgfqpoint{0.542cm}{1.435cm}}{\pgfqpoint{0.542cm}{1.399cm}}
\pgfpathcurveto{\pgfqpoint{0.542cm}{1.363cm}}{\pgfqpoint{0.557cm}{1.328cm}}{\pgfqpoint{0.582cm}{1.302cm}}
\pgfpathcurveto{\pgfqpoint{0.608cm}{1.276cm}}{\pgfqpoint{0.643cm}{1.262cm}}{\pgfqpoint{0.679cm}{1.262cm}}
\pgfpathcurveto{\pgfqpoint{0.715cm}{1.262cm}}{\pgfqpoint{0.75cm}{1.276cm}}{\pgfqpoint{0.775cm}{1.302cm}}
\pgfpathcurveto{\pgfqpoint{0.801cm}{1.328cm}}{\pgfqpoint{0.815cm}{1.363cm}}{\pgfqpoint{0.815cm}{1.399cm}}
\pgfusepath{fill}
\pgfpathmoveto{\pgfqpoint{1.345cm}{1.371cm}}
\pgfpathcurveto{\pgfqpoint{1.345cm}{1.408cm}}{\pgfqpoint{1.331cm}{1.442cm}}{\pgfqpoint{1.305cm}{1.468cm}}
\pgfpathcurveto{\pgfqpoint{1.28cm}{1.494cm}}{\pgfqpoint{1.245cm}{1.508cm}}{\pgfqpoint{1.209cm}{1.508cm}}
\pgfpathcurveto{\pgfqpoint{1.172cm}{1.508cm}}{\pgfqpoint{1.138cm}{1.494cm}}{\pgfqpoint{1.112cm}{1.468cm}}
\pgfpathcurveto{\pgfqpoint{1.087cm}{1.442cm}}{\pgfqpoint{1.072cm}{1.408cm}}{\pgfqpoint{1.072cm}{1.371cm}}
\pgfpathcurveto{\pgfqpoint{1.072cm}{1.335cm}}{\pgfqpoint{1.087cm}{1.3cm}}{\pgfqpoint{1.112cm}{1.274cm}}
\pgfpathcurveto{\pgfqpoint{1.138cm}{1.249cm}}{\pgfqpoint{1.172cm}{1.234cm}}{\pgfqpoint{1.209cm}{1.234cm}}
\pgfpathcurveto{\pgfqpoint{1.245cm}{1.234cm}}{\pgfqpoint{1.28cm}{1.249cm}}{\pgfqpoint{1.305cm}{1.274cm}}
\pgfpathcurveto{\pgfqpoint{1.331cm}{1.3cm}}{\pgfqpoint{1.345cm}{1.335cm}}{\pgfqpoint{1.345cm}{1.371cm}}
\pgfusepath{fill}
\begin{pgfscope}
\pgfsetdash{}{0cm}
\pgfsetlinewidth{0.818mm}
\pgfsetroundcap
\pgfsetmiterlimit{4.0}
\pgfpathmoveto{\pgfqpoint{0.682cm}{0.671cm}}
\pgfpathlineto{\pgfqpoint{0.682cm}{0.042cm}}
\pgfusepath{stroke}
\end{pgfscope}
\end{pgfscope}
\end{pgfscope}
\end{pgfscope}
\end{tikzpicture}}} \succ
     X_{M, \varepsilon}^{\!\resizebox{0.6em}{!}{
\begin{tikzpicture}
\pgfpathmoveto{\pgfqpoint{0cm}{-0.035cm}}
\pgfpathlineto{\pgfqpoint{1.376cm}{-0.035cm}}
\pgfpathlineto{\pgfqpoint{1.376cm}{1.552cm}}
\pgfpathlineto{\pgfqpoint{0cm}{1.552cm}}
\pgfpathclose
\pgfusepath{clip}
\begin{pgfscope}
\begin{pgfscope}
\pgfpathmoveto{\pgfqpoint{0cm}{-0.035cm}}
\pgfpathlineto{\pgfqpoint{1.376cm}{-0.035cm}}
\pgfpathlineto{\pgfqpoint{1.376cm}{1.552cm}}
\pgfpathlineto{\pgfqpoint{0cm}{1.552cm}}
\pgfpathclose
\pgfusepath{clip}
\begin{pgfscope}
\begin{pgfscope}
\pgfsetdash{}{0cm}
\pgfsetlinewidth{0.818mm}
\pgfsetroundcap
\pgfsetroundjoin
\pgfsetmiterlimit{7.0}
\definecolor{eps2pgf_color}{gray}{0}\pgfsetstrokecolor{eps2pgf_color}\pgfsetfillcolor{eps2pgf_color}
\pgfpathmoveto{\pgfqpoint{0.117cm}{1.421cm}}
\pgfpathlineto{\pgfqpoint{0.682cm}{0.671cm}}
\pgfpathlineto{\pgfqpoint{1.246cm}{1.421cm}}
\pgfusepath{stroke}
\end{pgfscope}
\definecolor{eps2pgf_color}{gray}{0}\pgfsetstrokecolor{eps2pgf_color}\pgfsetfillcolor{eps2pgf_color}
\pgfpathmoveto{\pgfqpoint{0.273cm}{1.395cm}}
\pgfpathcurveto{\pgfqpoint{0.273cm}{1.432cm}}{\pgfqpoint{0.259cm}{1.467cm}}{\pgfqpoint{0.233cm}{1.492cm}}
\pgfpathcurveto{\pgfqpoint{0.207cm}{1.518cm}}{\pgfqpoint{0.173cm}{1.532cm}}{\pgfqpoint{0.137cm}{1.532cm}}
\pgfpathcurveto{\pgfqpoint{0.1cm}{1.532cm}}{\pgfqpoint{0.066cm}{1.518cm}}{\pgfqpoint{0.04cm}{1.492cm}}
\pgfpathcurveto{\pgfqpoint{0.014cm}{1.467cm}}{\pgfqpoint{0cm}{1.432cm}}{\pgfqpoint{0cm}{1.395cm}}
\pgfpathcurveto{\pgfqpoint{0cm}{1.359cm}}{\pgfqpoint{0.014cm}{1.324cm}}{\pgfqpoint{0.04cm}{1.299cm}}
\pgfpathcurveto{\pgfqpoint{0.066cm}{1.273cm}}{\pgfqpoint{0.1cm}{1.258cm}}{\pgfqpoint{0.137cm}{1.258cm}}
\pgfpathcurveto{\pgfqpoint{0.173cm}{1.258cm}}{\pgfqpoint{0.207cm}{1.273cm}}{\pgfqpoint{0.233cm}{1.299cm}}
\pgfpathcurveto{\pgfqpoint{0.259cm}{1.324cm}}{\pgfqpoint{0.273cm}{1.359cm}}{\pgfqpoint{0.273cm}{1.395cm}}
\pgfusepath{fill}
\begin{pgfscope}
\pgfsetdash{}{0cm}
\pgfsetlinewidth{0.818mm}
\pgfsetmiterlimit{7.0}
\pgfpathmoveto{\pgfqpoint{0.682cm}{0.671cm}}
\pgfpathlineto{\pgfqpoint{0.679cm}{1.418cm}}
\pgfusepath{stroke}
\end{pgfscope}
\pgfpathmoveto{\pgfqpoint{0.815cm}{1.399cm}}
\pgfpathcurveto{\pgfqpoint{0.815cm}{1.435cm}}{\pgfqpoint{0.801cm}{1.47cm}}{\pgfqpoint{0.775cm}{1.496cm}}
\pgfpathcurveto{\pgfqpoint{0.75cm}{1.521cm}}{\pgfqpoint{0.715cm}{1.536cm}}{\pgfqpoint{0.679cm}{1.536cm}}
\pgfpathcurveto{\pgfqpoint{0.643cm}{1.536cm}}{\pgfqpoint{0.608cm}{1.521cm}}{\pgfqpoint{0.582cm}{1.496cm}}
\pgfpathcurveto{\pgfqpoint{0.557cm}{1.47cm}}{\pgfqpoint{0.542cm}{1.435cm}}{\pgfqpoint{0.542cm}{1.399cm}}
\pgfpathcurveto{\pgfqpoint{0.542cm}{1.363cm}}{\pgfqpoint{0.557cm}{1.328cm}}{\pgfqpoint{0.582cm}{1.302cm}}
\pgfpathcurveto{\pgfqpoint{0.608cm}{1.276cm}}{\pgfqpoint{0.643cm}{1.262cm}}{\pgfqpoint{0.679cm}{1.262cm}}
\pgfpathcurveto{\pgfqpoint{0.715cm}{1.262cm}}{\pgfqpoint{0.75cm}{1.276cm}}{\pgfqpoint{0.775cm}{1.302cm}}
\pgfpathcurveto{\pgfqpoint{0.801cm}{1.328cm}}{\pgfqpoint{0.815cm}{1.363cm}}{\pgfqpoint{0.815cm}{1.399cm}}
\pgfusepath{fill}
\pgfpathmoveto{\pgfqpoint{1.345cm}{1.371cm}}
\pgfpathcurveto{\pgfqpoint{1.345cm}{1.408cm}}{\pgfqpoint{1.331cm}{1.442cm}}{\pgfqpoint{1.305cm}{1.468cm}}
\pgfpathcurveto{\pgfqpoint{1.28cm}{1.494cm}}{\pgfqpoint{1.245cm}{1.508cm}}{\pgfqpoint{1.209cm}{1.508cm}}
\pgfpathcurveto{\pgfqpoint{1.172cm}{1.508cm}}{\pgfqpoint{1.138cm}{1.494cm}}{\pgfqpoint{1.112cm}{1.468cm}}
\pgfpathcurveto{\pgfqpoint{1.087cm}{1.442cm}}{\pgfqpoint{1.072cm}{1.408cm}}{\pgfqpoint{1.072cm}{1.371cm}}
\pgfpathcurveto{\pgfqpoint{1.072cm}{1.335cm}}{\pgfqpoint{1.087cm}{1.3cm}}{\pgfqpoint{1.112cm}{1.274cm}}
\pgfpathcurveto{\pgfqpoint{1.138cm}{1.249cm}}{\pgfqpoint{1.172cm}{1.234cm}}{\pgfqpoint{1.209cm}{1.234cm}}
\pgfpathcurveto{\pgfqpoint{1.245cm}{1.234cm}}{\pgfqpoint{1.28cm}{1.249cm}}{\pgfqpoint{1.305cm}{1.274cm}}
\pgfpathcurveto{\pgfqpoint{1.331cm}{1.3cm}}{\pgfqpoint{1.345cm}{1.335cm}}{\pgfqpoint{1.345cm}{1.371cm}}
\pgfusepath{fill}
\begin{pgfscope}
\pgfsetdash{}{0cm}
\pgfsetlinewidth{0.818mm}
\pgfsetroundcap
\pgfsetmiterlimit{4.0}
\pgfpathmoveto{\pgfqpoint{0.682cm}{0.671cm}}
\pgfpathlineto{\pgfqpoint{0.682cm}{0.042cm}}
\pgfusepath{stroke}
\end{pgfscope}
\end{pgfscope}
\end{pgfscope}
\end{pgfscope}
\end{tikzpicture}}}) + 3\lambda^2 X_{M, \varepsilon} \circ (
     X_{M, \varepsilon}^{\!\resizebox{0.6em}{!}{
\begin{tikzpicture}
\pgfpathmoveto{\pgfqpoint{0cm}{-0.035cm}}
\pgfpathlineto{\pgfqpoint{1.376cm}{-0.035cm}}
\pgfpathlineto{\pgfqpoint{1.376cm}{1.552cm}}
\pgfpathlineto{\pgfqpoint{0cm}{1.552cm}}
\pgfpathclose
\pgfusepath{clip}
\begin{pgfscope}
\begin{pgfscope}
\pgfpathmoveto{\pgfqpoint{0cm}{-0.035cm}}
\pgfpathlineto{\pgfqpoint{1.376cm}{-0.035cm}}
\pgfpathlineto{\pgfqpoint{1.376cm}{1.552cm}}
\pgfpathlineto{\pgfqpoint{0cm}{1.552cm}}
\pgfpathclose
\pgfusepath{clip}
\begin{pgfscope}
\begin{pgfscope}
\pgfsetdash{}{0cm}
\pgfsetlinewidth{0.818mm}
\pgfsetroundcap
\pgfsetroundjoin
\pgfsetmiterlimit{7.0}
\definecolor{eps2pgf_color}{gray}{0}\pgfsetstrokecolor{eps2pgf_color}\pgfsetfillcolor{eps2pgf_color}
\pgfpathmoveto{\pgfqpoint{0.117cm}{1.421cm}}
\pgfpathlineto{\pgfqpoint{0.682cm}{0.671cm}}
\pgfpathlineto{\pgfqpoint{1.246cm}{1.421cm}}
\pgfusepath{stroke}
\end{pgfscope}
\definecolor{eps2pgf_color}{gray}{0}\pgfsetstrokecolor{eps2pgf_color}\pgfsetfillcolor{eps2pgf_color}
\pgfpathmoveto{\pgfqpoint{0.273cm}{1.395cm}}
\pgfpathcurveto{\pgfqpoint{0.273cm}{1.432cm}}{\pgfqpoint{0.259cm}{1.467cm}}{\pgfqpoint{0.233cm}{1.492cm}}
\pgfpathcurveto{\pgfqpoint{0.207cm}{1.518cm}}{\pgfqpoint{0.173cm}{1.532cm}}{\pgfqpoint{0.137cm}{1.532cm}}
\pgfpathcurveto{\pgfqpoint{0.1cm}{1.532cm}}{\pgfqpoint{0.066cm}{1.518cm}}{\pgfqpoint{0.04cm}{1.492cm}}
\pgfpathcurveto{\pgfqpoint{0.014cm}{1.467cm}}{\pgfqpoint{0cm}{1.432cm}}{\pgfqpoint{0cm}{1.395cm}}
\pgfpathcurveto{\pgfqpoint{0cm}{1.359cm}}{\pgfqpoint{0.014cm}{1.324cm}}{\pgfqpoint{0.04cm}{1.299cm}}
\pgfpathcurveto{\pgfqpoint{0.066cm}{1.273cm}}{\pgfqpoint{0.1cm}{1.258cm}}{\pgfqpoint{0.137cm}{1.258cm}}
\pgfpathcurveto{\pgfqpoint{0.173cm}{1.258cm}}{\pgfqpoint{0.207cm}{1.273cm}}{\pgfqpoint{0.233cm}{1.299cm}}
\pgfpathcurveto{\pgfqpoint{0.259cm}{1.324cm}}{\pgfqpoint{0.273cm}{1.359cm}}{\pgfqpoint{0.273cm}{1.395cm}}
\pgfusepath{fill}
\begin{pgfscope}
\pgfsetdash{}{0cm}
\pgfsetlinewidth{0.818mm}
\pgfsetmiterlimit{7.0}
\pgfpathmoveto{\pgfqpoint{0.682cm}{0.671cm}}
\pgfpathlineto{\pgfqpoint{0.679cm}{1.418cm}}
\pgfusepath{stroke}
\end{pgfscope}
\pgfpathmoveto{\pgfqpoint{0.815cm}{1.399cm}}
\pgfpathcurveto{\pgfqpoint{0.815cm}{1.435cm}}{\pgfqpoint{0.801cm}{1.47cm}}{\pgfqpoint{0.775cm}{1.496cm}}
\pgfpathcurveto{\pgfqpoint{0.75cm}{1.521cm}}{\pgfqpoint{0.715cm}{1.536cm}}{\pgfqpoint{0.679cm}{1.536cm}}
\pgfpathcurveto{\pgfqpoint{0.643cm}{1.536cm}}{\pgfqpoint{0.608cm}{1.521cm}}{\pgfqpoint{0.582cm}{1.496cm}}
\pgfpathcurveto{\pgfqpoint{0.557cm}{1.47cm}}{\pgfqpoint{0.542cm}{1.435cm}}{\pgfqpoint{0.542cm}{1.399cm}}
\pgfpathcurveto{\pgfqpoint{0.542cm}{1.363cm}}{\pgfqpoint{0.557cm}{1.328cm}}{\pgfqpoint{0.582cm}{1.302cm}}
\pgfpathcurveto{\pgfqpoint{0.608cm}{1.276cm}}{\pgfqpoint{0.643cm}{1.262cm}}{\pgfqpoint{0.679cm}{1.262cm}}
\pgfpathcurveto{\pgfqpoint{0.715cm}{1.262cm}}{\pgfqpoint{0.75cm}{1.276cm}}{\pgfqpoint{0.775cm}{1.302cm}}
\pgfpathcurveto{\pgfqpoint{0.801cm}{1.328cm}}{\pgfqpoint{0.815cm}{1.363cm}}{\pgfqpoint{0.815cm}{1.399cm}}
\pgfusepath{fill}
\pgfpathmoveto{\pgfqpoint{1.345cm}{1.371cm}}
\pgfpathcurveto{\pgfqpoint{1.345cm}{1.408cm}}{\pgfqpoint{1.331cm}{1.442cm}}{\pgfqpoint{1.305cm}{1.468cm}}
\pgfpathcurveto{\pgfqpoint{1.28cm}{1.494cm}}{\pgfqpoint{1.245cm}{1.508cm}}{\pgfqpoint{1.209cm}{1.508cm}}
\pgfpathcurveto{\pgfqpoint{1.172cm}{1.508cm}}{\pgfqpoint{1.138cm}{1.494cm}}{\pgfqpoint{1.112cm}{1.468cm}}
\pgfpathcurveto{\pgfqpoint{1.087cm}{1.442cm}}{\pgfqpoint{1.072cm}{1.408cm}}{\pgfqpoint{1.072cm}{1.371cm}}
\pgfpathcurveto{\pgfqpoint{1.072cm}{1.335cm}}{\pgfqpoint{1.087cm}{1.3cm}}{\pgfqpoint{1.112cm}{1.274cm}}
\pgfpathcurveto{\pgfqpoint{1.138cm}{1.249cm}}{\pgfqpoint{1.172cm}{1.234cm}}{\pgfqpoint{1.209cm}{1.234cm}}
\pgfpathcurveto{\pgfqpoint{1.245cm}{1.234cm}}{\pgfqpoint{1.28cm}{1.249cm}}{\pgfqpoint{1.305cm}{1.274cm}}
\pgfpathcurveto{\pgfqpoint{1.331cm}{1.3cm}}{\pgfqpoint{1.345cm}{1.335cm}}{\pgfqpoint{1.345cm}{1.371cm}}
\pgfusepath{fill}
\begin{pgfscope}
\pgfsetdash{}{0cm}
\pgfsetlinewidth{0.818mm}
\pgfsetroundcap
\pgfsetmiterlimit{4.0}
\pgfpathmoveto{\pgfqpoint{0.682cm}{0.671cm}}
\pgfpathlineto{\pgfqpoint{0.682cm}{0.042cm}}
\pgfusepath{stroke}
\end{pgfscope}
\end{pgfscope}
\end{pgfscope}
\end{pgfscope}
\end{tikzpicture}}} \circ X_{M, \varepsilon}^{\!\resizebox{0.6em}{!}{
\begin{tikzpicture}
\pgfpathmoveto{\pgfqpoint{0cm}{-0.035cm}}
\pgfpathlineto{\pgfqpoint{1.376cm}{-0.035cm}}
\pgfpathlineto{\pgfqpoint{1.376cm}{1.552cm}}
\pgfpathlineto{\pgfqpoint{0cm}{1.552cm}}
\pgfpathclose
\pgfusepath{clip}
\begin{pgfscope}
\begin{pgfscope}
\pgfpathmoveto{\pgfqpoint{0cm}{-0.035cm}}
\pgfpathlineto{\pgfqpoint{1.376cm}{-0.035cm}}
\pgfpathlineto{\pgfqpoint{1.376cm}{1.552cm}}
\pgfpathlineto{\pgfqpoint{0cm}{1.552cm}}
\pgfpathclose
\pgfusepath{clip}
\begin{pgfscope}
\begin{pgfscope}
\pgfsetdash{}{0cm}
\pgfsetlinewidth{0.818mm}
\pgfsetroundcap
\pgfsetroundjoin
\pgfsetmiterlimit{7.0}
\definecolor{eps2pgf_color}{gray}{0}\pgfsetstrokecolor{eps2pgf_color}\pgfsetfillcolor{eps2pgf_color}
\pgfpathmoveto{\pgfqpoint{0.117cm}{1.421cm}}
\pgfpathlineto{\pgfqpoint{0.682cm}{0.671cm}}
\pgfpathlineto{\pgfqpoint{1.246cm}{1.421cm}}
\pgfusepath{stroke}
\end{pgfscope}
\definecolor{eps2pgf_color}{gray}{0}\pgfsetstrokecolor{eps2pgf_color}\pgfsetfillcolor{eps2pgf_color}
\pgfpathmoveto{\pgfqpoint{0.273cm}{1.395cm}}
\pgfpathcurveto{\pgfqpoint{0.273cm}{1.432cm}}{\pgfqpoint{0.259cm}{1.467cm}}{\pgfqpoint{0.233cm}{1.492cm}}
\pgfpathcurveto{\pgfqpoint{0.207cm}{1.518cm}}{\pgfqpoint{0.173cm}{1.532cm}}{\pgfqpoint{0.137cm}{1.532cm}}
\pgfpathcurveto{\pgfqpoint{0.1cm}{1.532cm}}{\pgfqpoint{0.066cm}{1.518cm}}{\pgfqpoint{0.04cm}{1.492cm}}
\pgfpathcurveto{\pgfqpoint{0.014cm}{1.467cm}}{\pgfqpoint{0cm}{1.432cm}}{\pgfqpoint{0cm}{1.395cm}}
\pgfpathcurveto{\pgfqpoint{0cm}{1.359cm}}{\pgfqpoint{0.014cm}{1.324cm}}{\pgfqpoint{0.04cm}{1.299cm}}
\pgfpathcurveto{\pgfqpoint{0.066cm}{1.273cm}}{\pgfqpoint{0.1cm}{1.258cm}}{\pgfqpoint{0.137cm}{1.258cm}}
\pgfpathcurveto{\pgfqpoint{0.173cm}{1.258cm}}{\pgfqpoint{0.207cm}{1.273cm}}{\pgfqpoint{0.233cm}{1.299cm}}
\pgfpathcurveto{\pgfqpoint{0.259cm}{1.324cm}}{\pgfqpoint{0.273cm}{1.359cm}}{\pgfqpoint{0.273cm}{1.395cm}}
\pgfusepath{fill}
\begin{pgfscope}
\pgfsetdash{}{0cm}
\pgfsetlinewidth{0.818mm}
\pgfsetmiterlimit{7.0}
\pgfpathmoveto{\pgfqpoint{0.682cm}{0.671cm}}
\pgfpathlineto{\pgfqpoint{0.679cm}{1.418cm}}
\pgfusepath{stroke}
\end{pgfscope}
\pgfpathmoveto{\pgfqpoint{0.815cm}{1.399cm}}
\pgfpathcurveto{\pgfqpoint{0.815cm}{1.435cm}}{\pgfqpoint{0.801cm}{1.47cm}}{\pgfqpoint{0.775cm}{1.496cm}}
\pgfpathcurveto{\pgfqpoint{0.75cm}{1.521cm}}{\pgfqpoint{0.715cm}{1.536cm}}{\pgfqpoint{0.679cm}{1.536cm}}
\pgfpathcurveto{\pgfqpoint{0.643cm}{1.536cm}}{\pgfqpoint{0.608cm}{1.521cm}}{\pgfqpoint{0.582cm}{1.496cm}}
\pgfpathcurveto{\pgfqpoint{0.557cm}{1.47cm}}{\pgfqpoint{0.542cm}{1.435cm}}{\pgfqpoint{0.542cm}{1.399cm}}
\pgfpathcurveto{\pgfqpoint{0.542cm}{1.363cm}}{\pgfqpoint{0.557cm}{1.328cm}}{\pgfqpoint{0.582cm}{1.302cm}}
\pgfpathcurveto{\pgfqpoint{0.608cm}{1.276cm}}{\pgfqpoint{0.643cm}{1.262cm}}{\pgfqpoint{0.679cm}{1.262cm}}
\pgfpathcurveto{\pgfqpoint{0.715cm}{1.262cm}}{\pgfqpoint{0.75cm}{1.276cm}}{\pgfqpoint{0.775cm}{1.302cm}}
\pgfpathcurveto{\pgfqpoint{0.801cm}{1.328cm}}{\pgfqpoint{0.815cm}{1.363cm}}{\pgfqpoint{0.815cm}{1.399cm}}
\pgfusepath{fill}
\pgfpathmoveto{\pgfqpoint{1.345cm}{1.371cm}}
\pgfpathcurveto{\pgfqpoint{1.345cm}{1.408cm}}{\pgfqpoint{1.331cm}{1.442cm}}{\pgfqpoint{1.305cm}{1.468cm}}
\pgfpathcurveto{\pgfqpoint{1.28cm}{1.494cm}}{\pgfqpoint{1.245cm}{1.508cm}}{\pgfqpoint{1.209cm}{1.508cm}}
\pgfpathcurveto{\pgfqpoint{1.172cm}{1.508cm}}{\pgfqpoint{1.138cm}{1.494cm}}{\pgfqpoint{1.112cm}{1.468cm}}
\pgfpathcurveto{\pgfqpoint{1.087cm}{1.442cm}}{\pgfqpoint{1.072cm}{1.408cm}}{\pgfqpoint{1.072cm}{1.371cm}}
\pgfpathcurveto{\pgfqpoint{1.072cm}{1.335cm}}{\pgfqpoint{1.087cm}{1.3cm}}{\pgfqpoint{1.112cm}{1.274cm}}
\pgfpathcurveto{\pgfqpoint{1.138cm}{1.249cm}}{\pgfqpoint{1.172cm}{1.234cm}}{\pgfqpoint{1.209cm}{1.234cm}}
\pgfpathcurveto{\pgfqpoint{1.245cm}{1.234cm}}{\pgfqpoint{1.28cm}{1.249cm}}{\pgfqpoint{1.305cm}{1.274cm}}
\pgfpathcurveto{\pgfqpoint{1.331cm}{1.3cm}}{\pgfqpoint{1.345cm}{1.335cm}}{\pgfqpoint{1.345cm}{1.371cm}}
\pgfusepath{fill}
\begin{pgfscope}
\pgfsetdash{}{0cm}
\pgfsetlinewidth{0.818mm}
\pgfsetroundcap
\pgfsetmiterlimit{4.0}
\pgfpathmoveto{\pgfqpoint{0.682cm}{0.671cm}}
\pgfpathlineto{\pgfqpoint{0.682cm}{0.042cm}}
\pgfusepath{stroke}
\end{pgfscope}
\end{pgfscope}
\end{pgfscope}
\end{pgfscope}
\end{tikzpicture}}}
     )\\
     &  & - 6\lambda X_{M, \varepsilon} \circ ( X_{M, \varepsilon}^{\!\resizebox{0.6em}{!}{
\begin{tikzpicture}
\pgfpathmoveto{\pgfqpoint{0cm}{-0.035cm}}
\pgfpathlineto{\pgfqpoint{1.376cm}{-0.035cm}}
\pgfpathlineto{\pgfqpoint{1.376cm}{1.552cm}}
\pgfpathlineto{\pgfqpoint{0cm}{1.552cm}}
\pgfpathclose
\pgfusepath{clip}
\begin{pgfscope}
\begin{pgfscope}
\pgfpathmoveto{\pgfqpoint{0cm}{-0.035cm}}
\pgfpathlineto{\pgfqpoint{1.376cm}{-0.035cm}}
\pgfpathlineto{\pgfqpoint{1.376cm}{1.552cm}}
\pgfpathlineto{\pgfqpoint{0cm}{1.552cm}}
\pgfpathclose
\pgfusepath{clip}
\begin{pgfscope}
\begin{pgfscope}
\pgfsetdash{}{0cm}
\pgfsetlinewidth{0.818mm}
\pgfsetroundcap
\pgfsetroundjoin
\pgfsetmiterlimit{7.0}
\definecolor{eps2pgf_color}{gray}{0}\pgfsetstrokecolor{eps2pgf_color}\pgfsetfillcolor{eps2pgf_color}
\pgfpathmoveto{\pgfqpoint{0.117cm}{1.421cm}}
\pgfpathlineto{\pgfqpoint{0.682cm}{0.671cm}}
\pgfpathlineto{\pgfqpoint{1.246cm}{1.421cm}}
\pgfusepath{stroke}
\end{pgfscope}
\definecolor{eps2pgf_color}{gray}{0}\pgfsetstrokecolor{eps2pgf_color}\pgfsetfillcolor{eps2pgf_color}
\pgfpathmoveto{\pgfqpoint{0.273cm}{1.395cm}}
\pgfpathcurveto{\pgfqpoint{0.273cm}{1.432cm}}{\pgfqpoint{0.259cm}{1.467cm}}{\pgfqpoint{0.233cm}{1.492cm}}
\pgfpathcurveto{\pgfqpoint{0.207cm}{1.518cm}}{\pgfqpoint{0.173cm}{1.532cm}}{\pgfqpoint{0.137cm}{1.532cm}}
\pgfpathcurveto{\pgfqpoint{0.1cm}{1.532cm}}{\pgfqpoint{0.066cm}{1.518cm}}{\pgfqpoint{0.04cm}{1.492cm}}
\pgfpathcurveto{\pgfqpoint{0.014cm}{1.467cm}}{\pgfqpoint{0cm}{1.432cm}}{\pgfqpoint{0cm}{1.395cm}}
\pgfpathcurveto{\pgfqpoint{0cm}{1.359cm}}{\pgfqpoint{0.014cm}{1.324cm}}{\pgfqpoint{0.04cm}{1.299cm}}
\pgfpathcurveto{\pgfqpoint{0.066cm}{1.273cm}}{\pgfqpoint{0.1cm}{1.258cm}}{\pgfqpoint{0.137cm}{1.258cm}}
\pgfpathcurveto{\pgfqpoint{0.173cm}{1.258cm}}{\pgfqpoint{0.207cm}{1.273cm}}{\pgfqpoint{0.233cm}{1.299cm}}
\pgfpathcurveto{\pgfqpoint{0.259cm}{1.324cm}}{\pgfqpoint{0.273cm}{1.359cm}}{\pgfqpoint{0.273cm}{1.395cm}}
\pgfusepath{fill}
\begin{pgfscope}
\pgfsetdash{}{0cm}
\pgfsetlinewidth{0.818mm}
\pgfsetmiterlimit{7.0}
\pgfpathmoveto{\pgfqpoint{0.682cm}{0.671cm}}
\pgfpathlineto{\pgfqpoint{0.679cm}{1.418cm}}
\pgfusepath{stroke}
\end{pgfscope}
\pgfpathmoveto{\pgfqpoint{0.815cm}{1.399cm}}
\pgfpathcurveto{\pgfqpoint{0.815cm}{1.435cm}}{\pgfqpoint{0.801cm}{1.47cm}}{\pgfqpoint{0.775cm}{1.496cm}}
\pgfpathcurveto{\pgfqpoint{0.75cm}{1.521cm}}{\pgfqpoint{0.715cm}{1.536cm}}{\pgfqpoint{0.679cm}{1.536cm}}
\pgfpathcurveto{\pgfqpoint{0.643cm}{1.536cm}}{\pgfqpoint{0.608cm}{1.521cm}}{\pgfqpoint{0.582cm}{1.496cm}}
\pgfpathcurveto{\pgfqpoint{0.557cm}{1.47cm}}{\pgfqpoint{0.542cm}{1.435cm}}{\pgfqpoint{0.542cm}{1.399cm}}
\pgfpathcurveto{\pgfqpoint{0.542cm}{1.363cm}}{\pgfqpoint{0.557cm}{1.328cm}}{\pgfqpoint{0.582cm}{1.302cm}}
\pgfpathcurveto{\pgfqpoint{0.608cm}{1.276cm}}{\pgfqpoint{0.643cm}{1.262cm}}{\pgfqpoint{0.679cm}{1.262cm}}
\pgfpathcurveto{\pgfqpoint{0.715cm}{1.262cm}}{\pgfqpoint{0.75cm}{1.276cm}}{\pgfqpoint{0.775cm}{1.302cm}}
\pgfpathcurveto{\pgfqpoint{0.801cm}{1.328cm}}{\pgfqpoint{0.815cm}{1.363cm}}{\pgfqpoint{0.815cm}{1.399cm}}
\pgfusepath{fill}
\pgfpathmoveto{\pgfqpoint{1.345cm}{1.371cm}}
\pgfpathcurveto{\pgfqpoint{1.345cm}{1.408cm}}{\pgfqpoint{1.331cm}{1.442cm}}{\pgfqpoint{1.305cm}{1.468cm}}
\pgfpathcurveto{\pgfqpoint{1.28cm}{1.494cm}}{\pgfqpoint{1.245cm}{1.508cm}}{\pgfqpoint{1.209cm}{1.508cm}}
\pgfpathcurveto{\pgfqpoint{1.172cm}{1.508cm}}{\pgfqpoint{1.138cm}{1.494cm}}{\pgfqpoint{1.112cm}{1.468cm}}
\pgfpathcurveto{\pgfqpoint{1.087cm}{1.442cm}}{\pgfqpoint{1.072cm}{1.408cm}}{\pgfqpoint{1.072cm}{1.371cm}}
\pgfpathcurveto{\pgfqpoint{1.072cm}{1.335cm}}{\pgfqpoint{1.087cm}{1.3cm}}{\pgfqpoint{1.112cm}{1.274cm}}
\pgfpathcurveto{\pgfqpoint{1.138cm}{1.249cm}}{\pgfqpoint{1.172cm}{1.234cm}}{\pgfqpoint{1.209cm}{1.234cm}}
\pgfpathcurveto{\pgfqpoint{1.245cm}{1.234cm}}{\pgfqpoint{1.28cm}{1.249cm}}{\pgfqpoint{1.305cm}{1.274cm}}
\pgfpathcurveto{\pgfqpoint{1.331cm}{1.3cm}}{\pgfqpoint{1.345cm}{1.335cm}}{\pgfqpoint{1.345cm}{1.371cm}}
\pgfusepath{fill}
\begin{pgfscope}
\pgfsetdash{}{0cm}
\pgfsetlinewidth{0.818mm}
\pgfsetroundcap
\pgfsetmiterlimit{4.0}
\pgfpathmoveto{\pgfqpoint{0.682cm}{0.671cm}}
\pgfpathlineto{\pgfqpoint{0.682cm}{0.042cm}}
\pgfusepath{stroke}
\end{pgfscope}
\end{pgfscope}
\end{pgfscope}
\end{pgfscope}
\end{tikzpicture}}}
     \succ \zeta_{M, \varepsilon} ) - 6\lambda X_{M, \varepsilon} \circ (
     X_{M, \varepsilon}^{\!\resizebox{0.6em}{!}{
\begin{tikzpicture}
\pgfpathmoveto{\pgfqpoint{0cm}{-0.035cm}}
\pgfpathlineto{\pgfqpoint{1.376cm}{-0.035cm}}
\pgfpathlineto{\pgfqpoint{1.376cm}{1.552cm}}
\pgfpathlineto{\pgfqpoint{0cm}{1.552cm}}
\pgfpathclose
\pgfusepath{clip}
\begin{pgfscope}
\begin{pgfscope}
\pgfpathmoveto{\pgfqpoint{0cm}{-0.035cm}}
\pgfpathlineto{\pgfqpoint{1.376cm}{-0.035cm}}
\pgfpathlineto{\pgfqpoint{1.376cm}{1.552cm}}
\pgfpathlineto{\pgfqpoint{0cm}{1.552cm}}
\pgfpathclose
\pgfusepath{clip}
\begin{pgfscope}
\begin{pgfscope}
\pgfsetdash{}{0cm}
\pgfsetlinewidth{0.818mm}
\pgfsetroundcap
\pgfsetroundjoin
\pgfsetmiterlimit{7.0}
\definecolor{eps2pgf_color}{gray}{0}\pgfsetstrokecolor{eps2pgf_color}\pgfsetfillcolor{eps2pgf_color}
\pgfpathmoveto{\pgfqpoint{0.117cm}{1.421cm}}
\pgfpathlineto{\pgfqpoint{0.682cm}{0.671cm}}
\pgfpathlineto{\pgfqpoint{1.246cm}{1.421cm}}
\pgfusepath{stroke}
\end{pgfscope}
\definecolor{eps2pgf_color}{gray}{0}\pgfsetstrokecolor{eps2pgf_color}\pgfsetfillcolor{eps2pgf_color}
\pgfpathmoveto{\pgfqpoint{0.273cm}{1.395cm}}
\pgfpathcurveto{\pgfqpoint{0.273cm}{1.432cm}}{\pgfqpoint{0.259cm}{1.467cm}}{\pgfqpoint{0.233cm}{1.492cm}}
\pgfpathcurveto{\pgfqpoint{0.207cm}{1.518cm}}{\pgfqpoint{0.173cm}{1.532cm}}{\pgfqpoint{0.137cm}{1.532cm}}
\pgfpathcurveto{\pgfqpoint{0.1cm}{1.532cm}}{\pgfqpoint{0.066cm}{1.518cm}}{\pgfqpoint{0.04cm}{1.492cm}}
\pgfpathcurveto{\pgfqpoint{0.014cm}{1.467cm}}{\pgfqpoint{0cm}{1.432cm}}{\pgfqpoint{0cm}{1.395cm}}
\pgfpathcurveto{\pgfqpoint{0cm}{1.359cm}}{\pgfqpoint{0.014cm}{1.324cm}}{\pgfqpoint{0.04cm}{1.299cm}}
\pgfpathcurveto{\pgfqpoint{0.066cm}{1.273cm}}{\pgfqpoint{0.1cm}{1.258cm}}{\pgfqpoint{0.137cm}{1.258cm}}
\pgfpathcurveto{\pgfqpoint{0.173cm}{1.258cm}}{\pgfqpoint{0.207cm}{1.273cm}}{\pgfqpoint{0.233cm}{1.299cm}}
\pgfpathcurveto{\pgfqpoint{0.259cm}{1.324cm}}{\pgfqpoint{0.273cm}{1.359cm}}{\pgfqpoint{0.273cm}{1.395cm}}
\pgfusepath{fill}
\begin{pgfscope}
\pgfsetdash{}{0cm}
\pgfsetlinewidth{0.818mm}
\pgfsetmiterlimit{7.0}
\pgfpathmoveto{\pgfqpoint{0.682cm}{0.671cm}}
\pgfpathlineto{\pgfqpoint{0.679cm}{1.418cm}}
\pgfusepath{stroke}
\end{pgfscope}
\pgfpathmoveto{\pgfqpoint{0.815cm}{1.399cm}}
\pgfpathcurveto{\pgfqpoint{0.815cm}{1.435cm}}{\pgfqpoint{0.801cm}{1.47cm}}{\pgfqpoint{0.775cm}{1.496cm}}
\pgfpathcurveto{\pgfqpoint{0.75cm}{1.521cm}}{\pgfqpoint{0.715cm}{1.536cm}}{\pgfqpoint{0.679cm}{1.536cm}}
\pgfpathcurveto{\pgfqpoint{0.643cm}{1.536cm}}{\pgfqpoint{0.608cm}{1.521cm}}{\pgfqpoint{0.582cm}{1.496cm}}
\pgfpathcurveto{\pgfqpoint{0.557cm}{1.47cm}}{\pgfqpoint{0.542cm}{1.435cm}}{\pgfqpoint{0.542cm}{1.399cm}}
\pgfpathcurveto{\pgfqpoint{0.542cm}{1.363cm}}{\pgfqpoint{0.557cm}{1.328cm}}{\pgfqpoint{0.582cm}{1.302cm}}
\pgfpathcurveto{\pgfqpoint{0.608cm}{1.276cm}}{\pgfqpoint{0.643cm}{1.262cm}}{\pgfqpoint{0.679cm}{1.262cm}}
\pgfpathcurveto{\pgfqpoint{0.715cm}{1.262cm}}{\pgfqpoint{0.75cm}{1.276cm}}{\pgfqpoint{0.775cm}{1.302cm}}
\pgfpathcurveto{\pgfqpoint{0.801cm}{1.328cm}}{\pgfqpoint{0.815cm}{1.363cm}}{\pgfqpoint{0.815cm}{1.399cm}}
\pgfusepath{fill}
\pgfpathmoveto{\pgfqpoint{1.345cm}{1.371cm}}
\pgfpathcurveto{\pgfqpoint{1.345cm}{1.408cm}}{\pgfqpoint{1.331cm}{1.442cm}}{\pgfqpoint{1.305cm}{1.468cm}}
\pgfpathcurveto{\pgfqpoint{1.28cm}{1.494cm}}{\pgfqpoint{1.245cm}{1.508cm}}{\pgfqpoint{1.209cm}{1.508cm}}
\pgfpathcurveto{\pgfqpoint{1.172cm}{1.508cm}}{\pgfqpoint{1.138cm}{1.494cm}}{\pgfqpoint{1.112cm}{1.468cm}}
\pgfpathcurveto{\pgfqpoint{1.087cm}{1.442cm}}{\pgfqpoint{1.072cm}{1.408cm}}{\pgfqpoint{1.072cm}{1.371cm}}
\pgfpathcurveto{\pgfqpoint{1.072cm}{1.335cm}}{\pgfqpoint{1.087cm}{1.3cm}}{\pgfqpoint{1.112cm}{1.274cm}}
\pgfpathcurveto{\pgfqpoint{1.138cm}{1.249cm}}{\pgfqpoint{1.172cm}{1.234cm}}{\pgfqpoint{1.209cm}{1.234cm}}
\pgfpathcurveto{\pgfqpoint{1.245cm}{1.234cm}}{\pgfqpoint{1.28cm}{1.249cm}}{\pgfqpoint{1.305cm}{1.274cm}}
\pgfpathcurveto{\pgfqpoint{1.331cm}{1.3cm}}{\pgfqpoint{1.345cm}{1.335cm}}{\pgfqpoint{1.345cm}{1.371cm}}
\pgfusepath{fill}
\begin{pgfscope}
\pgfsetdash{}{0cm}
\pgfsetlinewidth{0.818mm}
\pgfsetroundcap
\pgfsetmiterlimit{4.0}
\pgfpathmoveto{\pgfqpoint{0.682cm}{0.671cm}}
\pgfpathlineto{\pgfqpoint{0.682cm}{0.042cm}}
\pgfusepath{stroke}
\end{pgfscope}
\end{pgfscope}
\end{pgfscope}
\end{pgfscope}
\end{tikzpicture}}} \preccurlyeq \zeta_{M, \varepsilon}
     ) \\
     & &+ 3 X_{M, \varepsilon} \circ \zeta_{M, \varepsilon}^2\\
     & = & 6\lambda (\lambdaX_{M, \varepsilon}^{\!\resizebox{0.6em}{!}{
\begin{tikzpicture}
\pgfpathmoveto{\pgfqpoint{0cm}{-0.035cm}}
\pgfpathlineto{\pgfqpoint{1.376cm}{-0.035cm}}
\pgfpathlineto{\pgfqpoint{1.376cm}{1.552cm}}
\pgfpathlineto{\pgfqpoint{0cm}{1.552cm}}
\pgfpathclose
\pgfusepath{clip}
\begin{pgfscope}
\begin{pgfscope}
\pgfpathmoveto{\pgfqpoint{0cm}{-0.035cm}}
\pgfpathlineto{\pgfqpoint{1.376cm}{-0.035cm}}
\pgfpathlineto{\pgfqpoint{1.376cm}{1.552cm}}
\pgfpathlineto{\pgfqpoint{0cm}{1.552cm}}
\pgfpathclose
\pgfusepath{clip}
\begin{pgfscope}
\begin{pgfscope}
\pgfsetdash{}{0cm}
\pgfsetlinewidth{0.818mm}
\pgfsetroundcap
\pgfsetroundjoin
\pgfsetmiterlimit{7.0}
\definecolor{eps2pgf_color}{gray}{0}\pgfsetstrokecolor{eps2pgf_color}\pgfsetfillcolor{eps2pgf_color}
\pgfpathmoveto{\pgfqpoint{0.117cm}{1.421cm}}
\pgfpathlineto{\pgfqpoint{0.682cm}{0.671cm}}
\pgfpathlineto{\pgfqpoint{1.246cm}{1.421cm}}
\pgfusepath{stroke}
\end{pgfscope}
\definecolor{eps2pgf_color}{gray}{0}\pgfsetstrokecolor{eps2pgf_color}\pgfsetfillcolor{eps2pgf_color}
\pgfpathmoveto{\pgfqpoint{0.273cm}{1.395cm}}
\pgfpathcurveto{\pgfqpoint{0.273cm}{1.432cm}}{\pgfqpoint{0.259cm}{1.467cm}}{\pgfqpoint{0.233cm}{1.492cm}}
\pgfpathcurveto{\pgfqpoint{0.207cm}{1.518cm}}{\pgfqpoint{0.173cm}{1.532cm}}{\pgfqpoint{0.137cm}{1.532cm}}
\pgfpathcurveto{\pgfqpoint{0.1cm}{1.532cm}}{\pgfqpoint{0.066cm}{1.518cm}}{\pgfqpoint{0.04cm}{1.492cm}}
\pgfpathcurveto{\pgfqpoint{0.014cm}{1.467cm}}{\pgfqpoint{0cm}{1.432cm}}{\pgfqpoint{0cm}{1.395cm}}
\pgfpathcurveto{\pgfqpoint{0cm}{1.359cm}}{\pgfqpoint{0.014cm}{1.324cm}}{\pgfqpoint{0.04cm}{1.299cm}}
\pgfpathcurveto{\pgfqpoint{0.066cm}{1.273cm}}{\pgfqpoint{0.1cm}{1.258cm}}{\pgfqpoint{0.137cm}{1.258cm}}
\pgfpathcurveto{\pgfqpoint{0.173cm}{1.258cm}}{\pgfqpoint{0.207cm}{1.273cm}}{\pgfqpoint{0.233cm}{1.299cm}}
\pgfpathcurveto{\pgfqpoint{0.259cm}{1.324cm}}{\pgfqpoint{0.273cm}{1.359cm}}{\pgfqpoint{0.273cm}{1.395cm}}
\pgfusepath{fill}
\begin{pgfscope}
\pgfsetdash{}{0cm}
\pgfsetlinewidth{0.818mm}
\pgfsetmiterlimit{7.0}
\pgfpathmoveto{\pgfqpoint{0.682cm}{0.671cm}}
\pgfpathlineto{\pgfqpoint{0.679cm}{1.418cm}}
\pgfusepath{stroke}
\end{pgfscope}
\pgfpathmoveto{\pgfqpoint{0.815cm}{1.399cm}}
\pgfpathcurveto{\pgfqpoint{0.815cm}{1.435cm}}{\pgfqpoint{0.801cm}{1.47cm}}{\pgfqpoint{0.775cm}{1.496cm}}
\pgfpathcurveto{\pgfqpoint{0.75cm}{1.521cm}}{\pgfqpoint{0.715cm}{1.536cm}}{\pgfqpoint{0.679cm}{1.536cm}}
\pgfpathcurveto{\pgfqpoint{0.643cm}{1.536cm}}{\pgfqpoint{0.608cm}{1.521cm}}{\pgfqpoint{0.582cm}{1.496cm}}
\pgfpathcurveto{\pgfqpoint{0.557cm}{1.47cm}}{\pgfqpoint{0.542cm}{1.435cm}}{\pgfqpoint{0.542cm}{1.399cm}}
\pgfpathcurveto{\pgfqpoint{0.542cm}{1.363cm}}{\pgfqpoint{0.557cm}{1.328cm}}{\pgfqpoint{0.582cm}{1.302cm}}
\pgfpathcurveto{\pgfqpoint{0.608cm}{1.276cm}}{\pgfqpoint{0.643cm}{1.262cm}}{\pgfqpoint{0.679cm}{1.262cm}}
\pgfpathcurveto{\pgfqpoint{0.715cm}{1.262cm}}{\pgfqpoint{0.75cm}{1.276cm}}{\pgfqpoint{0.775cm}{1.302cm}}
\pgfpathcurveto{\pgfqpoint{0.801cm}{1.328cm}}{\pgfqpoint{0.815cm}{1.363cm}}{\pgfqpoint{0.815cm}{1.399cm}}
\pgfusepath{fill}
\pgfpathmoveto{\pgfqpoint{1.345cm}{1.371cm}}
\pgfpathcurveto{\pgfqpoint{1.345cm}{1.408cm}}{\pgfqpoint{1.331cm}{1.442cm}}{\pgfqpoint{1.305cm}{1.468cm}}
\pgfpathcurveto{\pgfqpoint{1.28cm}{1.494cm}}{\pgfqpoint{1.245cm}{1.508cm}}{\pgfqpoint{1.209cm}{1.508cm}}
\pgfpathcurveto{\pgfqpoint{1.172cm}{1.508cm}}{\pgfqpoint{1.138cm}{1.494cm}}{\pgfqpoint{1.112cm}{1.468cm}}
\pgfpathcurveto{\pgfqpoint{1.087cm}{1.442cm}}{\pgfqpoint{1.072cm}{1.408cm}}{\pgfqpoint{1.072cm}{1.371cm}}
\pgfpathcurveto{\pgfqpoint{1.072cm}{1.335cm}}{\pgfqpoint{1.087cm}{1.3cm}}{\pgfqpoint{1.112cm}{1.274cm}}
\pgfpathcurveto{\pgfqpoint{1.138cm}{1.249cm}}{\pgfqpoint{1.172cm}{1.234cm}}{\pgfqpoint{1.209cm}{1.234cm}}
\pgfpathcurveto{\pgfqpoint{1.245cm}{1.234cm}}{\pgfqpoint{1.28cm}{1.249cm}}{\pgfqpoint{1.305cm}{1.274cm}}
\pgfpathcurveto{\pgfqpoint{1.331cm}{1.3cm}}{\pgfqpoint{1.345cm}{1.335cm}}{\pgfqpoint{1.345cm}{1.371cm}}
\pgfusepath{fill}
\begin{pgfscope}
\pgfsetdash{}{0cm}
\pgfsetlinewidth{0.818mm}
\pgfsetroundcap
\pgfsetmiterlimit{4.0}
\pgfpathmoveto{\pgfqpoint{0.682cm}{0.671cm}}
\pgfpathlineto{\pgfqpoint{0.682cm}{0.042cm}}
\pgfusepath{stroke}
\end{pgfscope}
\end{pgfscope}
\end{pgfscope}
\end{pgfscope}
\end{tikzpicture}}} - \zeta_{M, \varepsilon})
     X_{M, \varepsilon}^{\!\resizebox{!}{.8em}{
\begin{tikzpicture}
\pgfpathmoveto{\pgfqpoint{0cm}{-0.035cm}}
\pgfpathlineto{\pgfqpoint{1.976cm}{-0.035cm}}
\pgfpathlineto{\pgfqpoint{1.976cm}{1.94cm}}
\pgfpathlineto{\pgfqpoint{0cm}{1.94cm}}
\pgfpathclose
\pgfusepath{clip}
\begin{pgfscope}
\begin{pgfscope}
\pgfpathmoveto{\pgfqpoint{0cm}{-0.035cm}}
\pgfpathlineto{\pgfqpoint{1.976cm}{-0.035cm}}
\pgfpathlineto{\pgfqpoint{1.976cm}{1.94cm}}
\pgfpathlineto{\pgfqpoint{0cm}{1.94cm}}
\pgfpathclose
\pgfusepath{clip}
\begin{pgfscope}
\begin{pgfscope}
\pgfsetdash{}{0cm}
\pgfsetlinewidth{0.818mm}
\pgfsetroundcap
\pgfsetroundjoin
\pgfsetmiterlimit{7.0}
\definecolor{eps2pgf_color}{gray}{0}\pgfsetstrokecolor{eps2pgf_color}\pgfsetfillcolor{eps2pgf_color}
\pgfpathmoveto{\pgfqpoint{0.117cm}{1.815cm}}
\pgfpathlineto{\pgfqpoint{0.682cm}{1.065cm}}
\pgfpathlineto{\pgfqpoint{1.246cm}{1.815cm}}
\pgfusepath{stroke}
\end{pgfscope}
\definecolor{eps2pgf_color}{gray}{0}\pgfsetstrokecolor{eps2pgf_color}\pgfsetfillcolor{eps2pgf_color}
\pgfpathmoveto{\pgfqpoint{0.273cm}{1.789cm}}
\pgfpathcurveto{\pgfqpoint{0.273cm}{1.825cm}}{\pgfqpoint{0.259cm}{1.86cm}}{\pgfqpoint{0.233cm}{1.886cm}}
\pgfpathcurveto{\pgfqpoint{0.207cm}{1.912cm}}{\pgfqpoint{0.173cm}{1.926cm}}{\pgfqpoint{0.137cm}{1.926cm}}
\pgfpathcurveto{\pgfqpoint{0.1cm}{1.926cm}}{\pgfqpoint{0.066cm}{1.912cm}}{\pgfqpoint{0.04cm}{1.886cm}}
\pgfpathcurveto{\pgfqpoint{0.014cm}{1.86cm}}{\pgfqpoint{0cm}{1.825cm}}{\pgfqpoint{0cm}{1.789cm}}
\pgfpathcurveto{\pgfqpoint{0cm}{1.753cm}}{\pgfqpoint{0.014cm}{1.718cm}}{\pgfqpoint{0.04cm}{1.692cm}}
\pgfpathcurveto{\pgfqpoint{0.066cm}{1.667cm}}{\pgfqpoint{0.1cm}{1.652cm}}{\pgfqpoint{0.137cm}{1.652cm}}
\pgfpathcurveto{\pgfqpoint{0.173cm}{1.652cm}}{\pgfqpoint{0.207cm}{1.667cm}}{\pgfqpoint{0.233cm}{1.692cm}}
\pgfpathcurveto{\pgfqpoint{0.259cm}{1.718cm}}{\pgfqpoint{0.273cm}{1.753cm}}{\pgfqpoint{0.273cm}{1.789cm}}
\pgfusepath{fill}
\begin{pgfscope}
\pgfsetdash{}{0cm}
\pgfsetlinewidth{0.818mm}
\pgfsetmiterlimit{7.0}
\pgfpathmoveto{\pgfqpoint{0.682cm}{1.065cm}}
\pgfpathlineto{\pgfqpoint{0.679cm}{1.812cm}}
\pgfusepath{stroke}
\end{pgfscope}
\pgfpathmoveto{\pgfqpoint{0.815cm}{1.793cm}}
\pgfpathcurveto{\pgfqpoint{0.815cm}{1.829cm}}{\pgfqpoint{0.801cm}{1.864cm}}{\pgfqpoint{0.775cm}{1.89cm}}
\pgfpathcurveto{\pgfqpoint{0.75cm}{1.915cm}}{\pgfqpoint{0.715cm}{1.93cm}}{\pgfqpoint{0.679cm}{1.93cm}}
\pgfpathcurveto{\pgfqpoint{0.643cm}{1.93cm}}{\pgfqpoint{0.608cm}{1.915cm}}{\pgfqpoint{0.582cm}{1.89cm}}
\pgfpathcurveto{\pgfqpoint{0.557cm}{1.864cm}}{\pgfqpoint{0.542cm}{1.829cm}}{\pgfqpoint{0.542cm}{1.793cm}}
\pgfpathcurveto{\pgfqpoint{0.542cm}{1.756cm}}{\pgfqpoint{0.557cm}{1.722cm}}{\pgfqpoint{0.582cm}{1.696cm}}
\pgfpathcurveto{\pgfqpoint{0.608cm}{1.67cm}}{\pgfqpoint{0.643cm}{1.656cm}}{\pgfqpoint{0.679cm}{1.656cm}}
\pgfpathcurveto{\pgfqpoint{0.715cm}{1.656cm}}{\pgfqpoint{0.75cm}{1.67cm}}{\pgfqpoint{0.775cm}{1.696cm}}
\pgfpathcurveto{\pgfqpoint{0.801cm}{1.722cm}}{\pgfqpoint{0.815cm}{1.756cm}}{\pgfqpoint{0.815cm}{1.793cm}}
\pgfusepath{fill}
\pgfpathmoveto{\pgfqpoint{1.345cm}{1.765cm}}
\pgfpathcurveto{\pgfqpoint{1.345cm}{1.801cm}}{\pgfqpoint{1.331cm}{1.836cm}}{\pgfqpoint{1.305cm}{1.862cm}}
\pgfpathcurveto{\pgfqpoint{1.28cm}{1.887cm}}{\pgfqpoint{1.245cm}{1.902cm}}{\pgfqpoint{1.209cm}{1.902cm}}
\pgfpathcurveto{\pgfqpoint{1.172cm}{1.902cm}}{\pgfqpoint{1.138cm}{1.887cm}}{\pgfqpoint{1.112cm}{1.862cm}}
\pgfpathcurveto{\pgfqpoint{1.087cm}{1.836cm}}{\pgfqpoint{1.072cm}{1.801cm}}{\pgfqpoint{1.072cm}{1.765cm}}
\pgfpathcurveto{\pgfqpoint{1.072cm}{1.728cm}}{\pgfqpoint{1.087cm}{1.694cm}}{\pgfqpoint{1.112cm}{1.668cm}}
\pgfpathcurveto{\pgfqpoint{1.138cm}{1.642cm}}{\pgfqpoint{1.172cm}{1.628cm}}{\pgfqpoint{1.209cm}{1.628cm}}
\pgfpathcurveto{\pgfqpoint{1.245cm}{1.628cm}}{\pgfqpoint{1.28cm}{1.642cm}}{\pgfqpoint{1.305cm}{1.668cm}}
\pgfpathcurveto{\pgfqpoint{1.331cm}{1.694cm}}{\pgfqpoint{1.345cm}{1.728cm}}{\pgfqpoint{1.345cm}{1.765cm}}
\pgfusepath{fill}
\begin{pgfscope}
\pgfsetdash{}{0cm}
\pgfsetlinewidth{0.818mm}
\pgfsetroundcap
\pgfsetroundjoin
\pgfsetmiterlimit{7.0}
\pgfpathmoveto{\pgfqpoint{0.682cm}{1.065cm}}
\pgfpathlineto{\pgfqpoint{1.246cm}{0.315cm}}
\pgfpathlineto{\pgfqpoint{1.811cm}{1.065cm}}
\pgfusepath{stroke}
\end{pgfscope}
\pgfpathmoveto{\pgfqpoint{1.948cm}{1.065cm}}
\pgfpathcurveto{\pgfqpoint{1.948cm}{1.101cm}}{\pgfqpoint{1.933cm}{1.136cm}}{\pgfqpoint{1.907cm}{1.162cm}}
\pgfpathcurveto{\pgfqpoint{1.882cm}{1.187cm}}{\pgfqpoint{1.847cm}{1.202cm}}{\pgfqpoint{1.811cm}{1.202cm}}
\pgfpathcurveto{\pgfqpoint{1.775cm}{1.202cm}}{\pgfqpoint{1.74cm}{1.187cm}}{\pgfqpoint{1.714cm}{1.162cm}}
\pgfpathcurveto{\pgfqpoint{1.689cm}{1.136cm}}{\pgfqpoint{1.674cm}{1.101cm}}{\pgfqpoint{1.674cm}{1.065cm}}
\pgfpathcurveto{\pgfqpoint{1.674cm}{1.029cm}}{\pgfqpoint{1.689cm}{0.994cm}}{\pgfqpoint{1.714cm}{0.968cm}}
\pgfpathcurveto{\pgfqpoint{1.74cm}{0.942cm}}{\pgfqpoint{1.775cm}{0.928cm}}{\pgfqpoint{1.811cm}{0.928cm}}
\pgfpathcurveto{\pgfqpoint{1.847cm}{0.928cm}}{\pgfqpoint{1.882cm}{0.942cm}}{\pgfqpoint{1.907cm}{0.968cm}}
\pgfpathcurveto{\pgfqpoint{1.933cm}{0.994cm}}{\pgfqpoint{1.948cm}{1.029cm}}{\pgfqpoint{1.948cm}{1.065cm}}
\pgfusepath{fill}
\begin{pgfscope}
\pgfsetdash{}{0cm}
\pgfsetlinewidth{0.818mm}
\pgfsetmiterlimit{4.0}
\pgfpathmoveto{\pgfqpoint{1.383cm}{0.178cm}}
\pgfpathcurveto{\pgfqpoint{1.383cm}{0.214cm}}{\pgfqpoint{1.369cm}{0.249cm}}{\pgfqpoint{1.343cm}{0.275cm}}
\pgfpathcurveto{\pgfqpoint{1.317cm}{0.3cm}}{\pgfqpoint{1.283cm}{0.315cm}}{\pgfqpoint{1.246cm}{0.315cm}}
\pgfpathcurveto{\pgfqpoint{1.21cm}{0.315cm}}{\pgfqpoint{1.175cm}{0.3cm}}{\pgfqpoint{1.15cm}{0.275cm}}
\pgfpathcurveto{\pgfqpoint{1.124cm}{0.249cm}}{\pgfqpoint{1.11cm}{0.214cm}}{\pgfqpoint{1.11cm}{0.178cm}}
\pgfpathcurveto{\pgfqpoint{1.11cm}{0.141cm}}{\pgfqpoint{1.124cm}{0.107cm}}{\pgfqpoint{1.15cm}{0.081cm}}
\pgfpathcurveto{\pgfqpoint{1.175cm}{0.055cm}}{\pgfqpoint{1.21cm}{0.041cm}}{\pgfqpoint{1.246cm}{0.041cm}}
\pgfpathcurveto{\pgfqpoint{1.283cm}{0.041cm}}{\pgfqpoint{1.317cm}{0.055cm}}{\pgfqpoint{1.343cm}{0.081cm}}
\pgfpathcurveto{\pgfqpoint{1.369cm}{0.107cm}}{\pgfqpoint{1.383cm}{0.141cm}}{\pgfqpoint{1.383cm}{0.178cm}}
\pgfusepath{stroke}
\end{pgfscope}
\end{pgfscope}
\end{pgfscope}
\end{pgfscope}
\end{tikzpicture}}} + 6\lambda C_{\varepsilon} (\lambda
     X_{M, \varepsilon}^{\!\resizebox{0.6em}{!}{
\begin{tikzpicture}
\pgfpathmoveto{\pgfqpoint{0cm}{-0.035cm}}
\pgfpathlineto{\pgfqpoint{1.376cm}{-0.035cm}}
\pgfpathlineto{\pgfqpoint{1.376cm}{1.552cm}}
\pgfpathlineto{\pgfqpoint{0cm}{1.552cm}}
\pgfpathclose
\pgfusepath{clip}
\begin{pgfscope}
\begin{pgfscope}
\pgfpathmoveto{\pgfqpoint{0cm}{-0.035cm}}
\pgfpathlineto{\pgfqpoint{1.376cm}{-0.035cm}}
\pgfpathlineto{\pgfqpoint{1.376cm}{1.552cm}}
\pgfpathlineto{\pgfqpoint{0cm}{1.552cm}}
\pgfpathclose
\pgfusepath{clip}
\begin{pgfscope}
\begin{pgfscope}
\pgfsetdash{}{0cm}
\pgfsetlinewidth{0.818mm}
\pgfsetroundcap
\pgfsetroundjoin
\pgfsetmiterlimit{7.0}
\definecolor{eps2pgf_color}{gray}{0}\pgfsetstrokecolor{eps2pgf_color}\pgfsetfillcolor{eps2pgf_color}
\pgfpathmoveto{\pgfqpoint{0.117cm}{1.421cm}}
\pgfpathlineto{\pgfqpoint{0.682cm}{0.671cm}}
\pgfpathlineto{\pgfqpoint{1.246cm}{1.421cm}}
\pgfusepath{stroke}
\end{pgfscope}
\definecolor{eps2pgf_color}{gray}{0}\pgfsetstrokecolor{eps2pgf_color}\pgfsetfillcolor{eps2pgf_color}
\pgfpathmoveto{\pgfqpoint{0.273cm}{1.395cm}}
\pgfpathcurveto{\pgfqpoint{0.273cm}{1.432cm}}{\pgfqpoint{0.259cm}{1.467cm}}{\pgfqpoint{0.233cm}{1.492cm}}
\pgfpathcurveto{\pgfqpoint{0.207cm}{1.518cm}}{\pgfqpoint{0.173cm}{1.532cm}}{\pgfqpoint{0.137cm}{1.532cm}}
\pgfpathcurveto{\pgfqpoint{0.1cm}{1.532cm}}{\pgfqpoint{0.066cm}{1.518cm}}{\pgfqpoint{0.04cm}{1.492cm}}
\pgfpathcurveto{\pgfqpoint{0.014cm}{1.467cm}}{\pgfqpoint{0cm}{1.432cm}}{\pgfqpoint{0cm}{1.395cm}}
\pgfpathcurveto{\pgfqpoint{0cm}{1.359cm}}{\pgfqpoint{0.014cm}{1.324cm}}{\pgfqpoint{0.04cm}{1.299cm}}
\pgfpathcurveto{\pgfqpoint{0.066cm}{1.273cm}}{\pgfqpoint{0.1cm}{1.258cm}}{\pgfqpoint{0.137cm}{1.258cm}}
\pgfpathcurveto{\pgfqpoint{0.173cm}{1.258cm}}{\pgfqpoint{0.207cm}{1.273cm}}{\pgfqpoint{0.233cm}{1.299cm}}
\pgfpathcurveto{\pgfqpoint{0.259cm}{1.324cm}}{\pgfqpoint{0.273cm}{1.359cm}}{\pgfqpoint{0.273cm}{1.395cm}}
\pgfusepath{fill}
\begin{pgfscope}
\pgfsetdash{}{0cm}
\pgfsetlinewidth{0.818mm}
\pgfsetmiterlimit{7.0}
\pgfpathmoveto{\pgfqpoint{0.682cm}{0.671cm}}
\pgfpathlineto{\pgfqpoint{0.679cm}{1.418cm}}
\pgfusepath{stroke}
\end{pgfscope}
\pgfpathmoveto{\pgfqpoint{0.815cm}{1.399cm}}
\pgfpathcurveto{\pgfqpoint{0.815cm}{1.435cm}}{\pgfqpoint{0.801cm}{1.47cm}}{\pgfqpoint{0.775cm}{1.496cm}}
\pgfpathcurveto{\pgfqpoint{0.75cm}{1.521cm}}{\pgfqpoint{0.715cm}{1.536cm}}{\pgfqpoint{0.679cm}{1.536cm}}
\pgfpathcurveto{\pgfqpoint{0.643cm}{1.536cm}}{\pgfqpoint{0.608cm}{1.521cm}}{\pgfqpoint{0.582cm}{1.496cm}}
\pgfpathcurveto{\pgfqpoint{0.557cm}{1.47cm}}{\pgfqpoint{0.542cm}{1.435cm}}{\pgfqpoint{0.542cm}{1.399cm}}
\pgfpathcurveto{\pgfqpoint{0.542cm}{1.363cm}}{\pgfqpoint{0.557cm}{1.328cm}}{\pgfqpoint{0.582cm}{1.302cm}}
\pgfpathcurveto{\pgfqpoint{0.608cm}{1.276cm}}{\pgfqpoint{0.643cm}{1.262cm}}{\pgfqpoint{0.679cm}{1.262cm}}
\pgfpathcurveto{\pgfqpoint{0.715cm}{1.262cm}}{\pgfqpoint{0.75cm}{1.276cm}}{\pgfqpoint{0.775cm}{1.302cm}}
\pgfpathcurveto{\pgfqpoint{0.801cm}{1.328cm}}{\pgfqpoint{0.815cm}{1.363cm}}{\pgfqpoint{0.815cm}{1.399cm}}
\pgfusepath{fill}
\pgfpathmoveto{\pgfqpoint{1.345cm}{1.371cm}}
\pgfpathcurveto{\pgfqpoint{1.345cm}{1.408cm}}{\pgfqpoint{1.331cm}{1.442cm}}{\pgfqpoint{1.305cm}{1.468cm}}
\pgfpathcurveto{\pgfqpoint{1.28cm}{1.494cm}}{\pgfqpoint{1.245cm}{1.508cm}}{\pgfqpoint{1.209cm}{1.508cm}}
\pgfpathcurveto{\pgfqpoint{1.172cm}{1.508cm}}{\pgfqpoint{1.138cm}{1.494cm}}{\pgfqpoint{1.112cm}{1.468cm}}
\pgfpathcurveto{\pgfqpoint{1.087cm}{1.442cm}}{\pgfqpoint{1.072cm}{1.408cm}}{\pgfqpoint{1.072cm}{1.371cm}}
\pgfpathcurveto{\pgfqpoint{1.072cm}{1.335cm}}{\pgfqpoint{1.087cm}{1.3cm}}{\pgfqpoint{1.112cm}{1.274cm}}
\pgfpathcurveto{\pgfqpoint{1.138cm}{1.249cm}}{\pgfqpoint{1.172cm}{1.234cm}}{\pgfqpoint{1.209cm}{1.234cm}}
\pgfpathcurveto{\pgfqpoint{1.245cm}{1.234cm}}{\pgfqpoint{1.28cm}{1.249cm}}{\pgfqpoint{1.305cm}{1.274cm}}
\pgfpathcurveto{\pgfqpoint{1.331cm}{1.3cm}}{\pgfqpoint{1.345cm}{1.335cm}}{\pgfqpoint{1.345cm}{1.371cm}}
\pgfusepath{fill}
\begin{pgfscope}
\pgfsetdash{}{0cm}
\pgfsetlinewidth{0.818mm}
\pgfsetroundcap
\pgfsetmiterlimit{4.0}
\pgfpathmoveto{\pgfqpoint{0.682cm}{0.671cm}}
\pgfpathlineto{\pgfqpoint{0.682cm}{0.042cm}}
\pgfusepath{stroke}
\end{pgfscope}
\end{pgfscope}
\end{pgfscope}
\end{pgfscope}
\end{tikzpicture}}} - \zeta_{M, \varepsilon}, X_{M,
     \varepsilon}^{\!\resizebox{0.6em}{!}{
\begin{tikzpicture}
\pgfpathmoveto{\pgfqpoint{0cm}{-0.035cm}}
\pgfpathlineto{\pgfqpoint{1.376cm}{-0.035cm}}
\pgfpathlineto{\pgfqpoint{1.376cm}{1.552cm}}
\pgfpathlineto{\pgfqpoint{0cm}{1.552cm}}
\pgfpathclose
\pgfusepath{clip}
\begin{pgfscope}
\begin{pgfscope}
\pgfpathmoveto{\pgfqpoint{0cm}{-0.035cm}}
\pgfpathlineto{\pgfqpoint{1.376cm}{-0.035cm}}
\pgfpathlineto{\pgfqpoint{1.376cm}{1.552cm}}
\pgfpathlineto{\pgfqpoint{0cm}{1.552cm}}
\pgfpathclose
\pgfusepath{clip}
\begin{pgfscope}
\begin{pgfscope}
\pgfsetdash{}{0cm}
\pgfsetlinewidth{0.818mm}
\pgfsetroundcap
\pgfsetroundjoin
\pgfsetmiterlimit{7.0}
\definecolor{eps2pgf_color}{gray}{0}\pgfsetstrokecolor{eps2pgf_color}\pgfsetfillcolor{eps2pgf_color}
\pgfpathmoveto{\pgfqpoint{0.117cm}{1.421cm}}
\pgfpathlineto{\pgfqpoint{0.682cm}{0.671cm}}
\pgfpathlineto{\pgfqpoint{1.246cm}{1.421cm}}
\pgfusepath{stroke}
\end{pgfscope}
\definecolor{eps2pgf_color}{gray}{0}\pgfsetstrokecolor{eps2pgf_color}\pgfsetfillcolor{eps2pgf_color}
\pgfpathmoveto{\pgfqpoint{0.273cm}{1.395cm}}
\pgfpathcurveto{\pgfqpoint{0.273cm}{1.432cm}}{\pgfqpoint{0.259cm}{1.467cm}}{\pgfqpoint{0.233cm}{1.492cm}}
\pgfpathcurveto{\pgfqpoint{0.207cm}{1.518cm}}{\pgfqpoint{0.173cm}{1.532cm}}{\pgfqpoint{0.137cm}{1.532cm}}
\pgfpathcurveto{\pgfqpoint{0.1cm}{1.532cm}}{\pgfqpoint{0.066cm}{1.518cm}}{\pgfqpoint{0.04cm}{1.492cm}}
\pgfpathcurveto{\pgfqpoint{0.014cm}{1.467cm}}{\pgfqpoint{0cm}{1.432cm}}{\pgfqpoint{0cm}{1.395cm}}
\pgfpathcurveto{\pgfqpoint{0cm}{1.359cm}}{\pgfqpoint{0.014cm}{1.324cm}}{\pgfqpoint{0.04cm}{1.299cm}}
\pgfpathcurveto{\pgfqpoint{0.066cm}{1.273cm}}{\pgfqpoint{0.1cm}{1.258cm}}{\pgfqpoint{0.137cm}{1.258cm}}
\pgfpathcurveto{\pgfqpoint{0.173cm}{1.258cm}}{\pgfqpoint{0.207cm}{1.273cm}}{\pgfqpoint{0.233cm}{1.299cm}}
\pgfpathcurveto{\pgfqpoint{0.259cm}{1.324cm}}{\pgfqpoint{0.273cm}{1.359cm}}{\pgfqpoint{0.273cm}{1.395cm}}
\pgfusepath{fill}
\begin{pgfscope}
\pgfsetdash{}{0cm}
\pgfsetlinewidth{0.818mm}
\pgfsetmiterlimit{7.0}
\pgfpathmoveto{\pgfqpoint{0.682cm}{0.671cm}}
\pgfpathlineto{\pgfqpoint{0.679cm}{1.418cm}}
\pgfusepath{stroke}
\end{pgfscope}
\pgfpathmoveto{\pgfqpoint{0.815cm}{1.399cm}}
\pgfpathcurveto{\pgfqpoint{0.815cm}{1.435cm}}{\pgfqpoint{0.801cm}{1.47cm}}{\pgfqpoint{0.775cm}{1.496cm}}
\pgfpathcurveto{\pgfqpoint{0.75cm}{1.521cm}}{\pgfqpoint{0.715cm}{1.536cm}}{\pgfqpoint{0.679cm}{1.536cm}}
\pgfpathcurveto{\pgfqpoint{0.643cm}{1.536cm}}{\pgfqpoint{0.608cm}{1.521cm}}{\pgfqpoint{0.582cm}{1.496cm}}
\pgfpathcurveto{\pgfqpoint{0.557cm}{1.47cm}}{\pgfqpoint{0.542cm}{1.435cm}}{\pgfqpoint{0.542cm}{1.399cm}}
\pgfpathcurveto{\pgfqpoint{0.542cm}{1.363cm}}{\pgfqpoint{0.557cm}{1.328cm}}{\pgfqpoint{0.582cm}{1.302cm}}
\pgfpathcurveto{\pgfqpoint{0.608cm}{1.276cm}}{\pgfqpoint{0.643cm}{1.262cm}}{\pgfqpoint{0.679cm}{1.262cm}}
\pgfpathcurveto{\pgfqpoint{0.715cm}{1.262cm}}{\pgfqpoint{0.75cm}{1.276cm}}{\pgfqpoint{0.775cm}{1.302cm}}
\pgfpathcurveto{\pgfqpoint{0.801cm}{1.328cm}}{\pgfqpoint{0.815cm}{1.363cm}}{\pgfqpoint{0.815cm}{1.399cm}}
\pgfusepath{fill}
\pgfpathmoveto{\pgfqpoint{1.345cm}{1.371cm}}
\pgfpathcurveto{\pgfqpoint{1.345cm}{1.408cm}}{\pgfqpoint{1.331cm}{1.442cm}}{\pgfqpoint{1.305cm}{1.468cm}}
\pgfpathcurveto{\pgfqpoint{1.28cm}{1.494cm}}{\pgfqpoint{1.245cm}{1.508cm}}{\pgfqpoint{1.209cm}{1.508cm}}
\pgfpathcurveto{\pgfqpoint{1.172cm}{1.508cm}}{\pgfqpoint{1.138cm}{1.494cm}}{\pgfqpoint{1.112cm}{1.468cm}}
\pgfpathcurveto{\pgfqpoint{1.087cm}{1.442cm}}{\pgfqpoint{1.072cm}{1.408cm}}{\pgfqpoint{1.072cm}{1.371cm}}
\pgfpathcurveto{\pgfqpoint{1.072cm}{1.335cm}}{\pgfqpoint{1.087cm}{1.3cm}}{\pgfqpoint{1.112cm}{1.274cm}}
\pgfpathcurveto{\pgfqpoint{1.138cm}{1.249cm}}{\pgfqpoint{1.172cm}{1.234cm}}{\pgfqpoint{1.209cm}{1.234cm}}
\pgfpathcurveto{\pgfqpoint{1.245cm}{1.234cm}}{\pgfqpoint{1.28cm}{1.249cm}}{\pgfqpoint{1.305cm}{1.274cm}}
\pgfpathcurveto{\pgfqpoint{1.331cm}{1.3cm}}{\pgfqpoint{1.345cm}{1.335cm}}{\pgfqpoint{1.345cm}{1.371cm}}
\pgfusepath{fill}
\begin{pgfscope}
\pgfsetdash{}{0cm}
\pgfsetlinewidth{0.818mm}
\pgfsetroundcap
\pgfsetmiterlimit{4.0}
\pgfpathmoveto{\pgfqpoint{0.682cm}{0.671cm}}
\pgfpathlineto{\pgfqpoint{0.682cm}{0.042cm}}
\pgfusepath{stroke}
\end{pgfscope}
\end{pgfscope}
\end{pgfscope}
\end{pgfscope}
\end{tikzpicture}}}, X_{M, \varepsilon} )\\
     &  & + 3\lambda^2 X_{M, \varepsilon} \circ ( X_{M, \varepsilon}^{\!\resizebox{0.6em}{!}{
\begin{tikzpicture}
\pgfpathmoveto{\pgfqpoint{0cm}{-0.035cm}}
\pgfpathlineto{\pgfqpoint{1.376cm}{-0.035cm}}
\pgfpathlineto{\pgfqpoint{1.376cm}{1.552cm}}
\pgfpathlineto{\pgfqpoint{0cm}{1.552cm}}
\pgfpathclose
\pgfusepath{clip}
\begin{pgfscope}
\begin{pgfscope}
\pgfpathmoveto{\pgfqpoint{0cm}{-0.035cm}}
\pgfpathlineto{\pgfqpoint{1.376cm}{-0.035cm}}
\pgfpathlineto{\pgfqpoint{1.376cm}{1.552cm}}
\pgfpathlineto{\pgfqpoint{0cm}{1.552cm}}
\pgfpathclose
\pgfusepath{clip}
\begin{pgfscope}
\begin{pgfscope}
\pgfsetdash{}{0cm}
\pgfsetlinewidth{0.818mm}
\pgfsetroundcap
\pgfsetroundjoin
\pgfsetmiterlimit{7.0}
\definecolor{eps2pgf_color}{gray}{0}\pgfsetstrokecolor{eps2pgf_color}\pgfsetfillcolor{eps2pgf_color}
\pgfpathmoveto{\pgfqpoint{0.117cm}{1.421cm}}
\pgfpathlineto{\pgfqpoint{0.682cm}{0.671cm}}
\pgfpathlineto{\pgfqpoint{1.246cm}{1.421cm}}
\pgfusepath{stroke}
\end{pgfscope}
\definecolor{eps2pgf_color}{gray}{0}\pgfsetstrokecolor{eps2pgf_color}\pgfsetfillcolor{eps2pgf_color}
\pgfpathmoveto{\pgfqpoint{0.273cm}{1.395cm}}
\pgfpathcurveto{\pgfqpoint{0.273cm}{1.432cm}}{\pgfqpoint{0.259cm}{1.467cm}}{\pgfqpoint{0.233cm}{1.492cm}}
\pgfpathcurveto{\pgfqpoint{0.207cm}{1.518cm}}{\pgfqpoint{0.173cm}{1.532cm}}{\pgfqpoint{0.137cm}{1.532cm}}
\pgfpathcurveto{\pgfqpoint{0.1cm}{1.532cm}}{\pgfqpoint{0.066cm}{1.518cm}}{\pgfqpoint{0.04cm}{1.492cm}}
\pgfpathcurveto{\pgfqpoint{0.014cm}{1.467cm}}{\pgfqpoint{0cm}{1.432cm}}{\pgfqpoint{0cm}{1.395cm}}
\pgfpathcurveto{\pgfqpoint{0cm}{1.359cm}}{\pgfqpoint{0.014cm}{1.324cm}}{\pgfqpoint{0.04cm}{1.299cm}}
\pgfpathcurveto{\pgfqpoint{0.066cm}{1.273cm}}{\pgfqpoint{0.1cm}{1.258cm}}{\pgfqpoint{0.137cm}{1.258cm}}
\pgfpathcurveto{\pgfqpoint{0.173cm}{1.258cm}}{\pgfqpoint{0.207cm}{1.273cm}}{\pgfqpoint{0.233cm}{1.299cm}}
\pgfpathcurveto{\pgfqpoint{0.259cm}{1.324cm}}{\pgfqpoint{0.273cm}{1.359cm}}{\pgfqpoint{0.273cm}{1.395cm}}
\pgfusepath{fill}
\begin{pgfscope}
\pgfsetdash{}{0cm}
\pgfsetlinewidth{0.818mm}
\pgfsetmiterlimit{7.0}
\pgfpathmoveto{\pgfqpoint{0.682cm}{0.671cm}}
\pgfpathlineto{\pgfqpoint{0.679cm}{1.418cm}}
\pgfusepath{stroke}
\end{pgfscope}
\pgfpathmoveto{\pgfqpoint{0.815cm}{1.399cm}}
\pgfpathcurveto{\pgfqpoint{0.815cm}{1.435cm}}{\pgfqpoint{0.801cm}{1.47cm}}{\pgfqpoint{0.775cm}{1.496cm}}
\pgfpathcurveto{\pgfqpoint{0.75cm}{1.521cm}}{\pgfqpoint{0.715cm}{1.536cm}}{\pgfqpoint{0.679cm}{1.536cm}}
\pgfpathcurveto{\pgfqpoint{0.643cm}{1.536cm}}{\pgfqpoint{0.608cm}{1.521cm}}{\pgfqpoint{0.582cm}{1.496cm}}
\pgfpathcurveto{\pgfqpoint{0.557cm}{1.47cm}}{\pgfqpoint{0.542cm}{1.435cm}}{\pgfqpoint{0.542cm}{1.399cm}}
\pgfpathcurveto{\pgfqpoint{0.542cm}{1.363cm}}{\pgfqpoint{0.557cm}{1.328cm}}{\pgfqpoint{0.582cm}{1.302cm}}
\pgfpathcurveto{\pgfqpoint{0.608cm}{1.276cm}}{\pgfqpoint{0.643cm}{1.262cm}}{\pgfqpoint{0.679cm}{1.262cm}}
\pgfpathcurveto{\pgfqpoint{0.715cm}{1.262cm}}{\pgfqpoint{0.75cm}{1.276cm}}{\pgfqpoint{0.775cm}{1.302cm}}
\pgfpathcurveto{\pgfqpoint{0.801cm}{1.328cm}}{\pgfqpoint{0.815cm}{1.363cm}}{\pgfqpoint{0.815cm}{1.399cm}}
\pgfusepath{fill}
\pgfpathmoveto{\pgfqpoint{1.345cm}{1.371cm}}
\pgfpathcurveto{\pgfqpoint{1.345cm}{1.408cm}}{\pgfqpoint{1.331cm}{1.442cm}}{\pgfqpoint{1.305cm}{1.468cm}}
\pgfpathcurveto{\pgfqpoint{1.28cm}{1.494cm}}{\pgfqpoint{1.245cm}{1.508cm}}{\pgfqpoint{1.209cm}{1.508cm}}
\pgfpathcurveto{\pgfqpoint{1.172cm}{1.508cm}}{\pgfqpoint{1.138cm}{1.494cm}}{\pgfqpoint{1.112cm}{1.468cm}}
\pgfpathcurveto{\pgfqpoint{1.087cm}{1.442cm}}{\pgfqpoint{1.072cm}{1.408cm}}{\pgfqpoint{1.072cm}{1.371cm}}
\pgfpathcurveto{\pgfqpoint{1.072cm}{1.335cm}}{\pgfqpoint{1.087cm}{1.3cm}}{\pgfqpoint{1.112cm}{1.274cm}}
\pgfpathcurveto{\pgfqpoint{1.138cm}{1.249cm}}{\pgfqpoint{1.172cm}{1.234cm}}{\pgfqpoint{1.209cm}{1.234cm}}
\pgfpathcurveto{\pgfqpoint{1.245cm}{1.234cm}}{\pgfqpoint{1.28cm}{1.249cm}}{\pgfqpoint{1.305cm}{1.274cm}}
\pgfpathcurveto{\pgfqpoint{1.331cm}{1.3cm}}{\pgfqpoint{1.345cm}{1.335cm}}{\pgfqpoint{1.345cm}{1.371cm}}
\pgfusepath{fill}
\begin{pgfscope}
\pgfsetdash{}{0cm}
\pgfsetlinewidth{0.818mm}
\pgfsetroundcap
\pgfsetmiterlimit{4.0}
\pgfpathmoveto{\pgfqpoint{0.682cm}{0.671cm}}
\pgfpathlineto{\pgfqpoint{0.682cm}{0.042cm}}
\pgfusepath{stroke}
\end{pgfscope}
\end{pgfscope}
\end{pgfscope}
\end{pgfscope}
\end{tikzpicture}}}
     \circ X_{M, \varepsilon}^{\!\resizebox{0.6em}{!}{
\begin{tikzpicture}
\pgfpathmoveto{\pgfqpoint{0cm}{-0.035cm}}
\pgfpathlineto{\pgfqpoint{1.376cm}{-0.035cm}}
\pgfpathlineto{\pgfqpoint{1.376cm}{1.552cm}}
\pgfpathlineto{\pgfqpoint{0cm}{1.552cm}}
\pgfpathclose
\pgfusepath{clip}
\begin{pgfscope}
\begin{pgfscope}
\pgfpathmoveto{\pgfqpoint{0cm}{-0.035cm}}
\pgfpathlineto{\pgfqpoint{1.376cm}{-0.035cm}}
\pgfpathlineto{\pgfqpoint{1.376cm}{1.552cm}}
\pgfpathlineto{\pgfqpoint{0cm}{1.552cm}}
\pgfpathclose
\pgfusepath{clip}
\begin{pgfscope}
\begin{pgfscope}
\pgfsetdash{}{0cm}
\pgfsetlinewidth{0.818mm}
\pgfsetroundcap
\pgfsetroundjoin
\pgfsetmiterlimit{7.0}
\definecolor{eps2pgf_color}{gray}{0}\pgfsetstrokecolor{eps2pgf_color}\pgfsetfillcolor{eps2pgf_color}
\pgfpathmoveto{\pgfqpoint{0.117cm}{1.421cm}}
\pgfpathlineto{\pgfqpoint{0.682cm}{0.671cm}}
\pgfpathlineto{\pgfqpoint{1.246cm}{1.421cm}}
\pgfusepath{stroke}
\end{pgfscope}
\definecolor{eps2pgf_color}{gray}{0}\pgfsetstrokecolor{eps2pgf_color}\pgfsetfillcolor{eps2pgf_color}
\pgfpathmoveto{\pgfqpoint{0.273cm}{1.395cm}}
\pgfpathcurveto{\pgfqpoint{0.273cm}{1.432cm}}{\pgfqpoint{0.259cm}{1.467cm}}{\pgfqpoint{0.233cm}{1.492cm}}
\pgfpathcurveto{\pgfqpoint{0.207cm}{1.518cm}}{\pgfqpoint{0.173cm}{1.532cm}}{\pgfqpoint{0.137cm}{1.532cm}}
\pgfpathcurveto{\pgfqpoint{0.1cm}{1.532cm}}{\pgfqpoint{0.066cm}{1.518cm}}{\pgfqpoint{0.04cm}{1.492cm}}
\pgfpathcurveto{\pgfqpoint{0.014cm}{1.467cm}}{\pgfqpoint{0cm}{1.432cm}}{\pgfqpoint{0cm}{1.395cm}}
\pgfpathcurveto{\pgfqpoint{0cm}{1.359cm}}{\pgfqpoint{0.014cm}{1.324cm}}{\pgfqpoint{0.04cm}{1.299cm}}
\pgfpathcurveto{\pgfqpoint{0.066cm}{1.273cm}}{\pgfqpoint{0.1cm}{1.258cm}}{\pgfqpoint{0.137cm}{1.258cm}}
\pgfpathcurveto{\pgfqpoint{0.173cm}{1.258cm}}{\pgfqpoint{0.207cm}{1.273cm}}{\pgfqpoint{0.233cm}{1.299cm}}
\pgfpathcurveto{\pgfqpoint{0.259cm}{1.324cm}}{\pgfqpoint{0.273cm}{1.359cm}}{\pgfqpoint{0.273cm}{1.395cm}}
\pgfusepath{fill}
\begin{pgfscope}
\pgfsetdash{}{0cm}
\pgfsetlinewidth{0.818mm}
\pgfsetmiterlimit{7.0}
\pgfpathmoveto{\pgfqpoint{0.682cm}{0.671cm}}
\pgfpathlineto{\pgfqpoint{0.679cm}{1.418cm}}
\pgfusepath{stroke}
\end{pgfscope}
\pgfpathmoveto{\pgfqpoint{0.815cm}{1.399cm}}
\pgfpathcurveto{\pgfqpoint{0.815cm}{1.435cm}}{\pgfqpoint{0.801cm}{1.47cm}}{\pgfqpoint{0.775cm}{1.496cm}}
\pgfpathcurveto{\pgfqpoint{0.75cm}{1.521cm}}{\pgfqpoint{0.715cm}{1.536cm}}{\pgfqpoint{0.679cm}{1.536cm}}
\pgfpathcurveto{\pgfqpoint{0.643cm}{1.536cm}}{\pgfqpoint{0.608cm}{1.521cm}}{\pgfqpoint{0.582cm}{1.496cm}}
\pgfpathcurveto{\pgfqpoint{0.557cm}{1.47cm}}{\pgfqpoint{0.542cm}{1.435cm}}{\pgfqpoint{0.542cm}{1.399cm}}
\pgfpathcurveto{\pgfqpoint{0.542cm}{1.363cm}}{\pgfqpoint{0.557cm}{1.328cm}}{\pgfqpoint{0.582cm}{1.302cm}}
\pgfpathcurveto{\pgfqpoint{0.608cm}{1.276cm}}{\pgfqpoint{0.643cm}{1.262cm}}{\pgfqpoint{0.679cm}{1.262cm}}
\pgfpathcurveto{\pgfqpoint{0.715cm}{1.262cm}}{\pgfqpoint{0.75cm}{1.276cm}}{\pgfqpoint{0.775cm}{1.302cm}}
\pgfpathcurveto{\pgfqpoint{0.801cm}{1.328cm}}{\pgfqpoint{0.815cm}{1.363cm}}{\pgfqpoint{0.815cm}{1.399cm}}
\pgfusepath{fill}
\pgfpathmoveto{\pgfqpoint{1.345cm}{1.371cm}}
\pgfpathcurveto{\pgfqpoint{1.345cm}{1.408cm}}{\pgfqpoint{1.331cm}{1.442cm}}{\pgfqpoint{1.305cm}{1.468cm}}
\pgfpathcurveto{\pgfqpoint{1.28cm}{1.494cm}}{\pgfqpoint{1.245cm}{1.508cm}}{\pgfqpoint{1.209cm}{1.508cm}}
\pgfpathcurveto{\pgfqpoint{1.172cm}{1.508cm}}{\pgfqpoint{1.138cm}{1.494cm}}{\pgfqpoint{1.112cm}{1.468cm}}
\pgfpathcurveto{\pgfqpoint{1.087cm}{1.442cm}}{\pgfqpoint{1.072cm}{1.408cm}}{\pgfqpoint{1.072cm}{1.371cm}}
\pgfpathcurveto{\pgfqpoint{1.072cm}{1.335cm}}{\pgfqpoint{1.087cm}{1.3cm}}{\pgfqpoint{1.112cm}{1.274cm}}
\pgfpathcurveto{\pgfqpoint{1.138cm}{1.249cm}}{\pgfqpoint{1.172cm}{1.234cm}}{\pgfqpoint{1.209cm}{1.234cm}}
\pgfpathcurveto{\pgfqpoint{1.245cm}{1.234cm}}{\pgfqpoint{1.28cm}{1.249cm}}{\pgfqpoint{1.305cm}{1.274cm}}
\pgfpathcurveto{\pgfqpoint{1.331cm}{1.3cm}}{\pgfqpoint{1.345cm}{1.335cm}}{\pgfqpoint{1.345cm}{1.371cm}}
\pgfusepath{fill}
\begin{pgfscope}
\pgfsetdash{}{0cm}
\pgfsetlinewidth{0.818mm}
\pgfsetroundcap
\pgfsetmiterlimit{4.0}
\pgfpathmoveto{\pgfqpoint{0.682cm}{0.671cm}}
\pgfpathlineto{\pgfqpoint{0.682cm}{0.042cm}}
\pgfusepath{stroke}
\end{pgfscope}
\end{pgfscope}
\end{pgfscope}
\end{pgfscope}
\end{tikzpicture}}} ) - 6\lambda X_{M, \varepsilon} \circ
     ( X_{M, \varepsilon}^{\!\resizebox{0.6em}{!}{
\begin{tikzpicture}
\pgfpathmoveto{\pgfqpoint{0cm}{-0.035cm}}
\pgfpathlineto{\pgfqpoint{1.376cm}{-0.035cm}}
\pgfpathlineto{\pgfqpoint{1.376cm}{1.552cm}}
\pgfpathlineto{\pgfqpoint{0cm}{1.552cm}}
\pgfpathclose
\pgfusepath{clip}
\begin{pgfscope}
\begin{pgfscope}
\pgfpathmoveto{\pgfqpoint{0cm}{-0.035cm}}
\pgfpathlineto{\pgfqpoint{1.376cm}{-0.035cm}}
\pgfpathlineto{\pgfqpoint{1.376cm}{1.552cm}}
\pgfpathlineto{\pgfqpoint{0cm}{1.552cm}}
\pgfpathclose
\pgfusepath{clip}
\begin{pgfscope}
\begin{pgfscope}
\pgfsetdash{}{0cm}
\pgfsetlinewidth{0.818mm}
\pgfsetroundcap
\pgfsetroundjoin
\pgfsetmiterlimit{7.0}
\definecolor{eps2pgf_color}{gray}{0}\pgfsetstrokecolor{eps2pgf_color}\pgfsetfillcolor{eps2pgf_color}
\pgfpathmoveto{\pgfqpoint{0.117cm}{1.421cm}}
\pgfpathlineto{\pgfqpoint{0.682cm}{0.671cm}}
\pgfpathlineto{\pgfqpoint{1.246cm}{1.421cm}}
\pgfusepath{stroke}
\end{pgfscope}
\definecolor{eps2pgf_color}{gray}{0}\pgfsetstrokecolor{eps2pgf_color}\pgfsetfillcolor{eps2pgf_color}
\pgfpathmoveto{\pgfqpoint{0.273cm}{1.395cm}}
\pgfpathcurveto{\pgfqpoint{0.273cm}{1.432cm}}{\pgfqpoint{0.259cm}{1.467cm}}{\pgfqpoint{0.233cm}{1.492cm}}
\pgfpathcurveto{\pgfqpoint{0.207cm}{1.518cm}}{\pgfqpoint{0.173cm}{1.532cm}}{\pgfqpoint{0.137cm}{1.532cm}}
\pgfpathcurveto{\pgfqpoint{0.1cm}{1.532cm}}{\pgfqpoint{0.066cm}{1.518cm}}{\pgfqpoint{0.04cm}{1.492cm}}
\pgfpathcurveto{\pgfqpoint{0.014cm}{1.467cm}}{\pgfqpoint{0cm}{1.432cm}}{\pgfqpoint{0cm}{1.395cm}}
\pgfpathcurveto{\pgfqpoint{0cm}{1.359cm}}{\pgfqpoint{0.014cm}{1.324cm}}{\pgfqpoint{0.04cm}{1.299cm}}
\pgfpathcurveto{\pgfqpoint{0.066cm}{1.273cm}}{\pgfqpoint{0.1cm}{1.258cm}}{\pgfqpoint{0.137cm}{1.258cm}}
\pgfpathcurveto{\pgfqpoint{0.173cm}{1.258cm}}{\pgfqpoint{0.207cm}{1.273cm}}{\pgfqpoint{0.233cm}{1.299cm}}
\pgfpathcurveto{\pgfqpoint{0.259cm}{1.324cm}}{\pgfqpoint{0.273cm}{1.359cm}}{\pgfqpoint{0.273cm}{1.395cm}}
\pgfusepath{fill}
\begin{pgfscope}
\pgfsetdash{}{0cm}
\pgfsetlinewidth{0.818mm}
\pgfsetmiterlimit{7.0}
\pgfpathmoveto{\pgfqpoint{0.682cm}{0.671cm}}
\pgfpathlineto{\pgfqpoint{0.679cm}{1.418cm}}
\pgfusepath{stroke}
\end{pgfscope}
\pgfpathmoveto{\pgfqpoint{0.815cm}{1.399cm}}
\pgfpathcurveto{\pgfqpoint{0.815cm}{1.435cm}}{\pgfqpoint{0.801cm}{1.47cm}}{\pgfqpoint{0.775cm}{1.496cm}}
\pgfpathcurveto{\pgfqpoint{0.75cm}{1.521cm}}{\pgfqpoint{0.715cm}{1.536cm}}{\pgfqpoint{0.679cm}{1.536cm}}
\pgfpathcurveto{\pgfqpoint{0.643cm}{1.536cm}}{\pgfqpoint{0.608cm}{1.521cm}}{\pgfqpoint{0.582cm}{1.496cm}}
\pgfpathcurveto{\pgfqpoint{0.557cm}{1.47cm}}{\pgfqpoint{0.542cm}{1.435cm}}{\pgfqpoint{0.542cm}{1.399cm}}
\pgfpathcurveto{\pgfqpoint{0.542cm}{1.363cm}}{\pgfqpoint{0.557cm}{1.328cm}}{\pgfqpoint{0.582cm}{1.302cm}}
\pgfpathcurveto{\pgfqpoint{0.608cm}{1.276cm}}{\pgfqpoint{0.643cm}{1.262cm}}{\pgfqpoint{0.679cm}{1.262cm}}
\pgfpathcurveto{\pgfqpoint{0.715cm}{1.262cm}}{\pgfqpoint{0.75cm}{1.276cm}}{\pgfqpoint{0.775cm}{1.302cm}}
\pgfpathcurveto{\pgfqpoint{0.801cm}{1.328cm}}{\pgfqpoint{0.815cm}{1.363cm}}{\pgfqpoint{0.815cm}{1.399cm}}
\pgfusepath{fill}
\pgfpathmoveto{\pgfqpoint{1.345cm}{1.371cm}}
\pgfpathcurveto{\pgfqpoint{1.345cm}{1.408cm}}{\pgfqpoint{1.331cm}{1.442cm}}{\pgfqpoint{1.305cm}{1.468cm}}
\pgfpathcurveto{\pgfqpoint{1.28cm}{1.494cm}}{\pgfqpoint{1.245cm}{1.508cm}}{\pgfqpoint{1.209cm}{1.508cm}}
\pgfpathcurveto{\pgfqpoint{1.172cm}{1.508cm}}{\pgfqpoint{1.138cm}{1.494cm}}{\pgfqpoint{1.112cm}{1.468cm}}
\pgfpathcurveto{\pgfqpoint{1.087cm}{1.442cm}}{\pgfqpoint{1.072cm}{1.408cm}}{\pgfqpoint{1.072cm}{1.371cm}}
\pgfpathcurveto{\pgfqpoint{1.072cm}{1.335cm}}{\pgfqpoint{1.087cm}{1.3cm}}{\pgfqpoint{1.112cm}{1.274cm}}
\pgfpathcurveto{\pgfqpoint{1.138cm}{1.249cm}}{\pgfqpoint{1.172cm}{1.234cm}}{\pgfqpoint{1.209cm}{1.234cm}}
\pgfpathcurveto{\pgfqpoint{1.245cm}{1.234cm}}{\pgfqpoint{1.28cm}{1.249cm}}{\pgfqpoint{1.305cm}{1.274cm}}
\pgfpathcurveto{\pgfqpoint{1.331cm}{1.3cm}}{\pgfqpoint{1.345cm}{1.335cm}}{\pgfqpoint{1.345cm}{1.371cm}}
\pgfusepath{fill}
\begin{pgfscope}
\pgfsetdash{}{0cm}
\pgfsetlinewidth{0.818mm}
\pgfsetroundcap
\pgfsetmiterlimit{4.0}
\pgfpathmoveto{\pgfqpoint{0.682cm}{0.671cm}}
\pgfpathlineto{\pgfqpoint{0.682cm}{0.042cm}}
\pgfusepath{stroke}
\end{pgfscope}
\end{pgfscope}
\end{pgfscope}
\end{pgfscope}
\end{tikzpicture}}} \preccurlyeq \zeta_{M, \varepsilon}
     ) \\
     & &+ 3 X_{M, \varepsilon} \circ \zeta_{M, \varepsilon}^2,
   \end{array} \]
where we used the notation $f \preccurlyeq g = f \prec g + f \circ g$.

These decompositions and our estimates show that the products are all are
controlled in the space $L^1 (0, T, B^{- 1 - 3 \kappa, \varepsilon}_{1, 1}
(\rho^{4 + \sigma}))$. The term $\llbracket X^3_{M, \varepsilon} \rrbracket$
requires some care since it cannot be defined as a function of $t$. Indeed,
standard computations show that $\mathcal{E}^{\varepsilon} \llbracket X^3_{M,
\varepsilon} \rrbracket \rightarrow \llbracket X^3 \rrbracket$ in $W^{-
\kappa, \infty}_T \CC^{- 3 / 2 - \kappa, \varepsilon} (\rho^{\sigma})$,
namely, it requires just a mild regularization in time to be well defined and
it is the only one among the contributions to $\llbracket \varphi_{M,
\varepsilon}^3 \rrbracket$ which has negative time regularity. In particular,
we may write $\llbracket \varphi_{M, \varepsilon}^3 \rrbracket = \llbracket
X_{M, \varepsilon}^3 \rrbracket + H_{\varepsilon} (\varphi_{M, \varepsilon},
\mathbb{X}_{M, \varepsilon})$ where for $p \in [1, \infty)$
\[ \sup_{\varepsilon \in \mathcal{A}, M > 0} \mathbb{E} \| \llbracket X_{M,
   \varepsilon}^3 \rrbracket \|^{2 p}_{W^{- \kappa, \infty}_T \CC^{- 3 / 2 -
   \kappa, \varepsilon} (\rho^{\sigma})} + \sup_{\varepsilon \in \mathcal{A},
   M > 0} \mathbb{E} \| H_{\varepsilon} (\varphi_{M, \varepsilon},
   \mathbb{X}_{M, \varepsilon}) \|^{2 p}_{L^1_T B^{- 1 - 3 \kappa,
   \varepsilon}_{1, 1} (\rho^{4 + \sigma})} < \infty \]
is uniformly bounded in $M, \varepsilon$. The dependence of the function $H_{\varepsilon}$ on $\varepsilon$ comes from the corresponding dependence of the paraproducts as well as the resonant product on $\varepsilon$.

Now, let $h : \mathbb{R} \rightarrow \mathbb{R}$ be a smooth test function
with $\tmop{supp} h \subset [\tau, T]$ for some $0 < \tau < T < \infty$ and
such that $\int_{\mathbb{R}} h (t) \mathd t = 1$. Then by stationarity we can
rewrite the Littlewood--Paley blocks $\Delta^{\varepsilon}_j \mathcal{J}_{M,
\varepsilon} (F)$ as
\[ \Delta^{\varepsilon}_j \mathcal{J}_{M, \varepsilon} (F) = \int_{\mathbb{R}}
   h (t) \mathbb{E} [F (\mathcal{E}^{\varepsilon} \varphi_{M, \varepsilon}
   (t)) \Delta_j^{\varepsilon} \llbracket \varphi_{M, \varepsilon}^3 (t)
   \rrbracket_{M, \varepsilon}] \mathd t \]
\[ =\mathbb{E} \left[ \int_{\mathbb{R}} h (t) F (\mathcal{E}^{\varepsilon}
   \varphi_{M, \varepsilon} (t)) \Delta_j^{\varepsilon} \llbracket X_{M,
   \varepsilon}^3 \rrbracket (t) \mathd t \right] +\mathbb{E} \left[
   \int_{\mathbb{R}} h (t) F (\mathcal{E}^{\varepsilon} \varphi_{M,
   \varepsilon} (t)) \Delta_j^{\varepsilon} H_{\varepsilon} (\varphi_{M,
   \varepsilon}, \mathbb{X}_{M, \varepsilon}) (t) \mathd t \right] \]
\[ \backassign \Delta^{\varepsilon}_j \mathcal{J}^X_{M, \varepsilon} (F) +
   \Delta^{\varepsilon}_j \mathcal{J}^H_{M, \varepsilon} (F) . \]
As a consequence of Corollary~\ref{cor:t} and the discussion afterwards we
extract a subsequence converging in law and using the uniform bounds together with the $(\mathcal{E})$ property of our nonlinearities as defined on page 2073 in \cite{MP17},  we may
pass to the limit and conclude
\[ \lim_{\varepsilon \rightarrow 0, M \to \infty}
   \mathcal{E}^{\varepsilon} \mathcal{J}_{M, \varepsilon} (F)
   =\mathbb{E}_{\mu} \left[ \int_{\mathbb{R}} h (t) F (\varphi (t)) \llbracket
   \varphi^3 \rrbracket (t) \mathd t \right] \backassign \mathcal{J}_{\mu} (F)
   . \]
Here $\llbracket \varphi^3 \rrbracket$ is expressed (as $\llbracket
\varphi^3_{M, \varepsilon} \rrbracket$ before) as a measurable function of
$(\varphi, X)$ given by
\begin{equation}
  \begin{array}{lll}
    \llbracket \varphi^3 \rrbracket & \assign & \llbracket X^3 \rrbracket + 3
    \llbracket X^2 \rrbracket \Join (-\lambda X^{\!\resizebox{0.6em}{!}{
\begin{tikzpicture}
\pgfpathmoveto{\pgfqpoint{0cm}{-0.035cm}}
\pgfpathlineto{\pgfqpoint{1.376cm}{-0.035cm}}
\pgfpathlineto{\pgfqpoint{1.376cm}{1.552cm}}
\pgfpathlineto{\pgfqpoint{0cm}{1.552cm}}
\pgfpathclose
\pgfusepath{clip}
\begin{pgfscope}
\begin{pgfscope}
\pgfpathmoveto{\pgfqpoint{0cm}{-0.035cm}}
\pgfpathlineto{\pgfqpoint{1.376cm}{-0.035cm}}
\pgfpathlineto{\pgfqpoint{1.376cm}{1.552cm}}
\pgfpathlineto{\pgfqpoint{0cm}{1.552cm}}
\pgfpathclose
\pgfusepath{clip}
\begin{pgfscope}
\begin{pgfscope}
\pgfsetdash{}{0cm}
\pgfsetlinewidth{0.818mm}
\pgfsetroundcap
\pgfsetroundjoin
\pgfsetmiterlimit{7.0}
\definecolor{eps2pgf_color}{gray}{0}\pgfsetstrokecolor{eps2pgf_color}\pgfsetfillcolor{eps2pgf_color}
\pgfpathmoveto{\pgfqpoint{0.117cm}{1.421cm}}
\pgfpathlineto{\pgfqpoint{0.682cm}{0.671cm}}
\pgfpathlineto{\pgfqpoint{1.246cm}{1.421cm}}
\pgfusepath{stroke}
\end{pgfscope}
\definecolor{eps2pgf_color}{gray}{0}\pgfsetstrokecolor{eps2pgf_color}\pgfsetfillcolor{eps2pgf_color}
\pgfpathmoveto{\pgfqpoint{0.273cm}{1.395cm}}
\pgfpathcurveto{\pgfqpoint{0.273cm}{1.432cm}}{\pgfqpoint{0.259cm}{1.467cm}}{\pgfqpoint{0.233cm}{1.492cm}}
\pgfpathcurveto{\pgfqpoint{0.207cm}{1.518cm}}{\pgfqpoint{0.173cm}{1.532cm}}{\pgfqpoint{0.137cm}{1.532cm}}
\pgfpathcurveto{\pgfqpoint{0.1cm}{1.532cm}}{\pgfqpoint{0.066cm}{1.518cm}}{\pgfqpoint{0.04cm}{1.492cm}}
\pgfpathcurveto{\pgfqpoint{0.014cm}{1.467cm}}{\pgfqpoint{0cm}{1.432cm}}{\pgfqpoint{0cm}{1.395cm}}
\pgfpathcurveto{\pgfqpoint{0cm}{1.359cm}}{\pgfqpoint{0.014cm}{1.324cm}}{\pgfqpoint{0.04cm}{1.299cm}}
\pgfpathcurveto{\pgfqpoint{0.066cm}{1.273cm}}{\pgfqpoint{0.1cm}{1.258cm}}{\pgfqpoint{0.137cm}{1.258cm}}
\pgfpathcurveto{\pgfqpoint{0.173cm}{1.258cm}}{\pgfqpoint{0.207cm}{1.273cm}}{\pgfqpoint{0.233cm}{1.299cm}}
\pgfpathcurveto{\pgfqpoint{0.259cm}{1.324cm}}{\pgfqpoint{0.273cm}{1.359cm}}{\pgfqpoint{0.273cm}{1.395cm}}
\pgfusepath{fill}
\begin{pgfscope}
\pgfsetdash{}{0cm}
\pgfsetlinewidth{0.818mm}
\pgfsetmiterlimit{7.0}
\pgfpathmoveto{\pgfqpoint{0.682cm}{0.671cm}}
\pgfpathlineto{\pgfqpoint{0.679cm}{1.418cm}}
\pgfusepath{stroke}
\end{pgfscope}
\pgfpathmoveto{\pgfqpoint{0.815cm}{1.399cm}}
\pgfpathcurveto{\pgfqpoint{0.815cm}{1.435cm}}{\pgfqpoint{0.801cm}{1.47cm}}{\pgfqpoint{0.775cm}{1.496cm}}
\pgfpathcurveto{\pgfqpoint{0.75cm}{1.521cm}}{\pgfqpoint{0.715cm}{1.536cm}}{\pgfqpoint{0.679cm}{1.536cm}}
\pgfpathcurveto{\pgfqpoint{0.643cm}{1.536cm}}{\pgfqpoint{0.608cm}{1.521cm}}{\pgfqpoint{0.582cm}{1.496cm}}
\pgfpathcurveto{\pgfqpoint{0.557cm}{1.47cm}}{\pgfqpoint{0.542cm}{1.435cm}}{\pgfqpoint{0.542cm}{1.399cm}}
\pgfpathcurveto{\pgfqpoint{0.542cm}{1.363cm}}{\pgfqpoint{0.557cm}{1.328cm}}{\pgfqpoint{0.582cm}{1.302cm}}
\pgfpathcurveto{\pgfqpoint{0.608cm}{1.276cm}}{\pgfqpoint{0.643cm}{1.262cm}}{\pgfqpoint{0.679cm}{1.262cm}}
\pgfpathcurveto{\pgfqpoint{0.715cm}{1.262cm}}{\pgfqpoint{0.75cm}{1.276cm}}{\pgfqpoint{0.775cm}{1.302cm}}
\pgfpathcurveto{\pgfqpoint{0.801cm}{1.328cm}}{\pgfqpoint{0.815cm}{1.363cm}}{\pgfqpoint{0.815cm}{1.399cm}}
\pgfusepath{fill}
\pgfpathmoveto{\pgfqpoint{1.345cm}{1.371cm}}
\pgfpathcurveto{\pgfqpoint{1.345cm}{1.408cm}}{\pgfqpoint{1.331cm}{1.442cm}}{\pgfqpoint{1.305cm}{1.468cm}}
\pgfpathcurveto{\pgfqpoint{1.28cm}{1.494cm}}{\pgfqpoint{1.245cm}{1.508cm}}{\pgfqpoint{1.209cm}{1.508cm}}
\pgfpathcurveto{\pgfqpoint{1.172cm}{1.508cm}}{\pgfqpoint{1.138cm}{1.494cm}}{\pgfqpoint{1.112cm}{1.468cm}}
\pgfpathcurveto{\pgfqpoint{1.087cm}{1.442cm}}{\pgfqpoint{1.072cm}{1.408cm}}{\pgfqpoint{1.072cm}{1.371cm}}
\pgfpathcurveto{\pgfqpoint{1.072cm}{1.335cm}}{\pgfqpoint{1.087cm}{1.3cm}}{\pgfqpoint{1.112cm}{1.274cm}}
\pgfpathcurveto{\pgfqpoint{1.138cm}{1.249cm}}{\pgfqpoint{1.172cm}{1.234cm}}{\pgfqpoint{1.209cm}{1.234cm}}
\pgfpathcurveto{\pgfqpoint{1.245cm}{1.234cm}}{\pgfqpoint{1.28cm}{1.249cm}}{\pgfqpoint{1.305cm}{1.274cm}}
\pgfpathcurveto{\pgfqpoint{1.331cm}{1.3cm}}{\pgfqpoint{1.345cm}{1.335cm}}{\pgfqpoint{1.345cm}{1.371cm}}
\pgfusepath{fill}
\begin{pgfscope}
\pgfsetdash{}{0cm}
\pgfsetlinewidth{0.818mm}
\pgfsetroundcap
\pgfsetmiterlimit{4.0}
\pgfpathmoveto{\pgfqpoint{0.682cm}{0.671cm}}
\pgfpathlineto{\pgfqpoint{0.682cm}{0.042cm}}
\pgfusepath{stroke}
\end{pgfscope}
\end{pgfscope}
\end{pgfscope}
\end{pgfscope}
\end{tikzpicture}}} + \zeta) - \lambda Z -
  \lambda  \tilde{X}^{\!\resizebox{!}{.8em}{
\begin{tikzpicture}
\pgfpathmoveto{\pgfqpoint{0cm}{-0.035cm}}
\pgfpathlineto{\pgfqpoint{1.976cm}{-0.035cm}}
\pgfpathlineto{\pgfqpoint{1.976cm}{1.94cm}}
\pgfpathlineto{\pgfqpoint{0cm}{1.94cm}}
\pgfpathclose
\pgfusepath{clip}
\begin{pgfscope}
\begin{pgfscope}
\pgfpathmoveto{\pgfqpoint{0cm}{-0.035cm}}
\pgfpathlineto{\pgfqpoint{1.976cm}{-0.035cm}}
\pgfpathlineto{\pgfqpoint{1.976cm}{1.94cm}}
\pgfpathlineto{\pgfqpoint{0cm}{1.94cm}}
\pgfpathclose
\pgfusepath{clip}
\begin{pgfscope}
\begin{pgfscope}
\pgfsetdash{}{0cm}
\pgfsetlinewidth{0.818mm}
\pgfsetroundcap
\pgfsetroundjoin
\pgfsetmiterlimit{7.0}
\definecolor{eps2pgf_color}{gray}{0}\pgfsetstrokecolor{eps2pgf_color}\pgfsetfillcolor{eps2pgf_color}
\pgfpathmoveto{\pgfqpoint{0.117cm}{1.815cm}}
\pgfpathlineto{\pgfqpoint{0.682cm}{1.065cm}}
\pgfpathlineto{\pgfqpoint{1.246cm}{1.815cm}}
\pgfusepath{stroke}
\end{pgfscope}
\definecolor{eps2pgf_color}{gray}{0}\pgfsetstrokecolor{eps2pgf_color}\pgfsetfillcolor{eps2pgf_color}
\pgfpathmoveto{\pgfqpoint{0.273cm}{1.789cm}}
\pgfpathcurveto{\pgfqpoint{0.273cm}{1.825cm}}{\pgfqpoint{0.259cm}{1.86cm}}{\pgfqpoint{0.233cm}{1.886cm}}
\pgfpathcurveto{\pgfqpoint{0.207cm}{1.912cm}}{\pgfqpoint{0.173cm}{1.926cm}}{\pgfqpoint{0.137cm}{1.926cm}}
\pgfpathcurveto{\pgfqpoint{0.1cm}{1.926cm}}{\pgfqpoint{0.066cm}{1.912cm}}{\pgfqpoint{0.04cm}{1.886cm}}
\pgfpathcurveto{\pgfqpoint{0.014cm}{1.86cm}}{\pgfqpoint{0cm}{1.825cm}}{\pgfqpoint{0cm}{1.789cm}}
\pgfpathcurveto{\pgfqpoint{0cm}{1.753cm}}{\pgfqpoint{0.014cm}{1.718cm}}{\pgfqpoint{0.04cm}{1.692cm}}
\pgfpathcurveto{\pgfqpoint{0.066cm}{1.667cm}}{\pgfqpoint{0.1cm}{1.652cm}}{\pgfqpoint{0.137cm}{1.652cm}}
\pgfpathcurveto{\pgfqpoint{0.173cm}{1.652cm}}{\pgfqpoint{0.207cm}{1.667cm}}{\pgfqpoint{0.233cm}{1.692cm}}
\pgfpathcurveto{\pgfqpoint{0.259cm}{1.718cm}}{\pgfqpoint{0.273cm}{1.753cm}}{\pgfqpoint{0.273cm}{1.789cm}}
\pgfusepath{fill}
\pgfpathmoveto{\pgfqpoint{1.345cm}{1.765cm}}
\pgfpathcurveto{\pgfqpoint{1.345cm}{1.801cm}}{\pgfqpoint{1.331cm}{1.836cm}}{\pgfqpoint{1.305cm}{1.862cm}}
\pgfpathcurveto{\pgfqpoint{1.28cm}{1.887cm}}{\pgfqpoint{1.245cm}{1.902cm}}{\pgfqpoint{1.209cm}{1.902cm}}
\pgfpathcurveto{\pgfqpoint{1.172cm}{1.902cm}}{\pgfqpoint{1.138cm}{1.887cm}}{\pgfqpoint{1.112cm}{1.862cm}}
\pgfpathcurveto{\pgfqpoint{1.087cm}{1.836cm}}{\pgfqpoint{1.072cm}{1.801cm}}{\pgfqpoint{1.072cm}{1.765cm}}
\pgfpathcurveto{\pgfqpoint{1.072cm}{1.728cm}}{\pgfqpoint{1.087cm}{1.694cm}}{\pgfqpoint{1.112cm}{1.668cm}}
\pgfpathcurveto{\pgfqpoint{1.138cm}{1.642cm}}{\pgfqpoint{1.172cm}{1.628cm}}{\pgfqpoint{1.209cm}{1.628cm}}
\pgfpathcurveto{\pgfqpoint{1.245cm}{1.628cm}}{\pgfqpoint{1.28cm}{1.642cm}}{\pgfqpoint{1.305cm}{1.668cm}}
\pgfpathcurveto{\pgfqpoint{1.331cm}{1.694cm}}{\pgfqpoint{1.345cm}{1.728cm}}{\pgfqpoint{1.345cm}{1.765cm}}
\pgfusepath{fill}
\begin{pgfscope}
\pgfsetdash{}{0cm}
\pgfsetlinewidth{0.818mm}
\pgfsetroundcap
\pgfsetroundjoin
\pgfsetmiterlimit{7.0}
\pgfpathmoveto{\pgfqpoint{0.682cm}{1.065cm}}
\pgfpathlineto{\pgfqpoint{1.246cm}{0.315cm}}
\pgfpathlineto{\pgfqpoint{1.811cm}{1.065cm}}
\pgfusepath{stroke}
\end{pgfscope}
\pgfpathmoveto{\pgfqpoint{1.948cm}{1.065cm}}
\pgfpathcurveto{\pgfqpoint{1.948cm}{1.101cm}}{\pgfqpoint{1.933cm}{1.136cm}}{\pgfqpoint{1.907cm}{1.162cm}}
\pgfpathcurveto{\pgfqpoint{1.882cm}{1.187cm}}{\pgfqpoint{1.847cm}{1.202cm}}{\pgfqpoint{1.811cm}{1.202cm}}
\pgfpathcurveto{\pgfqpoint{1.775cm}{1.202cm}}{\pgfqpoint{1.74cm}{1.187cm}}{\pgfqpoint{1.714cm}{1.162cm}}
\pgfpathcurveto{\pgfqpoint{1.689cm}{1.136cm}}{\pgfqpoint{1.674cm}{1.101cm}}{\pgfqpoint{1.674cm}{1.065cm}}
\pgfpathcurveto{\pgfqpoint{1.674cm}{1.029cm}}{\pgfqpoint{1.689cm}{0.994cm}}{\pgfqpoint{1.714cm}{0.968cm}}
\pgfpathcurveto{\pgfqpoint{1.74cm}{0.942cm}}{\pgfqpoint{1.775cm}{0.928cm}}{\pgfqpoint{1.811cm}{0.928cm}}
\pgfpathcurveto{\pgfqpoint{1.847cm}{0.928cm}}{\pgfqpoint{1.882cm}{0.942cm}}{\pgfqpoint{1.907cm}{0.968cm}}
\pgfpathcurveto{\pgfqpoint{1.933cm}{0.994cm}}{\pgfqpoint{1.948cm}{1.029cm}}{\pgfqpoint{1.948cm}{1.065cm}}
\pgfusepath{fill}
\begin{pgfscope}
\pgfsetdash{}{0cm}
\pgfsetlinewidth{0.818mm}
\pgfsetmiterlimit{7.0}
\pgfpathmoveto{\pgfqpoint{1.246cm}{0.315cm}}
\pgfpathlineto{\pgfqpoint{1.244cm}{1.061cm}}
\pgfusepath{stroke}
\end{pgfscope}
\pgfpathmoveto{\pgfqpoint{1.38cm}{1.065cm}}
\pgfpathcurveto{\pgfqpoint{1.38cm}{1.101cm}}{\pgfqpoint{1.366cm}{1.136cm}}{\pgfqpoint{1.34cm}{1.162cm}}
\pgfpathcurveto{\pgfqpoint{1.315cm}{1.187cm}}{\pgfqpoint{1.28cm}{1.202cm}}{\pgfqpoint{1.244cm}{1.202cm}}
\pgfpathcurveto{\pgfqpoint{1.207cm}{1.202cm}}{\pgfqpoint{1.173cm}{1.187cm}}{\pgfqpoint{1.147cm}{1.162cm}}
\pgfpathcurveto{\pgfqpoint{1.121cm}{1.136cm}}{\pgfqpoint{1.107cm}{1.101cm}}{\pgfqpoint{1.107cm}{1.065cm}}
\pgfpathcurveto{\pgfqpoint{1.107cm}{1.029cm}}{\pgfqpoint{1.121cm}{0.994cm}}{\pgfqpoint{1.147cm}{0.968cm}}
\pgfpathcurveto{\pgfqpoint{1.173cm}{0.942cm}}{\pgfqpoint{1.207cm}{0.928cm}}{\pgfqpoint{1.244cm}{0.928cm}}
\pgfpathcurveto{\pgfqpoint{1.28cm}{0.928cm}}{\pgfqpoint{1.315cm}{0.942cm}}{\pgfqpoint{1.34cm}{0.968cm}}
\pgfpathcurveto{\pgfqpoint{1.366cm}{0.994cm}}{\pgfqpoint{1.38cm}{1.029cm}}{\pgfqpoint{1.38cm}{1.065cm}}
\pgfusepath{fill}
\begin{pgfscope}
\pgfsetdash{}{0cm}
\pgfsetlinewidth{0.818mm}
\pgfsetmiterlimit{4.0}
\pgfpathmoveto{\pgfqpoint{1.383cm}{0.178cm}}
\pgfpathcurveto{\pgfqpoint{1.383cm}{0.214cm}}{\pgfqpoint{1.369cm}{0.249cm}}{\pgfqpoint{1.343cm}{0.275cm}}
\pgfpathcurveto{\pgfqpoint{1.317cm}{0.3cm}}{\pgfqpoint{1.283cm}{0.315cm}}{\pgfqpoint{1.246cm}{0.315cm}}
\pgfpathcurveto{\pgfqpoint{1.21cm}{0.315cm}}{\pgfqpoint{1.175cm}{0.3cm}}{\pgfqpoint{1.15cm}{0.275cm}}
\pgfpathcurveto{\pgfqpoint{1.124cm}{0.249cm}}{\pgfqpoint{1.11cm}{0.214cm}}{\pgfqpoint{1.11cm}{0.178cm}}
\pgfpathcurveto{\pgfqpoint{1.11cm}{0.141cm}}{\pgfqpoint{1.124cm}{0.107cm}}{\pgfqpoint{1.15cm}{0.081cm}}
\pgfpathcurveto{\pgfqpoint{1.175cm}{0.055cm}}{\pgfqpoint{1.21cm}{0.041cm}}{\pgfqpoint{1.246cm}{0.041cm}}
\pgfpathcurveto{\pgfqpoint{1.283cm}{0.041cm}}{\pgfqpoint{1.317cm}{0.055cm}}{\pgfqpoint{1.343cm}{0.081cm}}
\pgfpathcurveto{\pgfqpoint{1.369cm}{0.107cm}}{\pgfqpoint{1.383cm}{0.141cm}}{\pgfqpoint{1.383cm}{0.178cm}}
\pgfusepath{stroke}
\end{pgfscope}
\end{pgfscope}
\end{pgfscope}
\end{pgfscope}
\end{tikzpicture}}} \phi + 3\lambda B(t) \phi\\
    &  & + \lambda C (\phi, - 3 X^{\!\resizebox{0.6em}{!}{
\begin{tikzpicture}
\pgfpathmoveto{\pgfqpoint{0cm}{0cm}}
\pgfpathlineto{\pgfqpoint{1.376cm}{0cm}}
\pgfpathlineto{\pgfqpoint{1.376cm}{1.588cm}}
\pgfpathlineto{\pgfqpoint{0cm}{1.588cm}}
\pgfpathclose
\pgfusepath{clip}
\begin{pgfscope}
\begin{pgfscope}
\pgfpathmoveto{\pgfqpoint{0cm}{0cm}}
\pgfpathlineto{\pgfqpoint{1.376cm}{0cm}}
\pgfpathlineto{\pgfqpoint{1.376cm}{1.588cm}}
\pgfpathlineto{\pgfqpoint{0cm}{1.588cm}}
\pgfpathclose
\pgfusepath{clip}
\begin{pgfscope}
\begin{pgfscope}
\definecolor{eps2pgf_color}{gray}{0.976471}\pgfsetstrokecolor{eps2pgf_color}\pgfsetfillcolor{eps2pgf_color}
\pgfpathmoveto{\pgfqpoint{0cm}{0cm}}
\pgfpathlineto{\pgfqpoint{1.376cm}{0cm}}
\pgfpathlineto{\pgfqpoint{1.376cm}{1.588cm}}
\pgfpathlineto{\pgfqpoint{0cm}{1.588cm}}
\pgfpathclose
\pgfusepath{fill}
\end{pgfscope}
\begin{pgfscope}
\pgfsetdash{}{0cm}
\pgfsetlinewidth{0.818mm}
\pgfsetroundcap
\pgfsetroundjoin
\pgfsetmiterlimit{7.0}
\definecolor{eps2pgf_color}{gray}{0}\pgfsetstrokecolor{eps2pgf_color}\pgfsetfillcolor{eps2pgf_color}
\pgfpathmoveto{\pgfqpoint{0.117cm}{1.476cm}}
\pgfpathlineto{\pgfqpoint{0.682cm}{0.726cm}}
\pgfpathlineto{\pgfqpoint{1.246cm}{1.476cm}}
\pgfusepath{stroke}
\end{pgfscope}
\definecolor{eps2pgf_color}{gray}{0}\pgfsetstrokecolor{eps2pgf_color}\pgfsetfillcolor{eps2pgf_color}
\pgfpathmoveto{\pgfqpoint{0.273cm}{1.451cm}}
\pgfpathcurveto{\pgfqpoint{0.273cm}{1.487cm}}{\pgfqpoint{0.259cm}{1.522cm}}{\pgfqpoint{0.233cm}{1.547cm}}
\pgfpathcurveto{\pgfqpoint{0.207cm}{1.573cm}}{\pgfqpoint{0.173cm}{1.588cm}}{\pgfqpoint{0.137cm}{1.588cm}}
\pgfpathcurveto{\pgfqpoint{0.1cm}{1.588cm}}{\pgfqpoint{0.066cm}{1.573cm}}{\pgfqpoint{0.04cm}{1.547cm}}
\pgfpathcurveto{\pgfqpoint{0.014cm}{1.522cm}}{\pgfqpoint{0cm}{1.487cm}}{\pgfqpoint{0cm}{1.451cm}}
\pgfpathcurveto{\pgfqpoint{0cm}{1.414cm}}{\pgfqpoint{0.014cm}{1.379cm}}{\pgfqpoint{0.04cm}{1.354cm}}
\pgfpathcurveto{\pgfqpoint{0.066cm}{1.328cm}}{\pgfqpoint{0.1cm}{1.314cm}}{\pgfqpoint{0.137cm}{1.314cm}}
\pgfpathcurveto{\pgfqpoint{0.173cm}{1.314cm}}{\pgfqpoint{0.207cm}{1.328cm}}{\pgfqpoint{0.233cm}{1.354cm}}
\pgfpathcurveto{\pgfqpoint{0.259cm}{1.379cm}}{\pgfqpoint{0.273cm}{1.414cm}}{\pgfqpoint{0.273cm}{1.451cm}}
\pgfusepath{fill}
\pgfpathmoveto{\pgfqpoint{1.345cm}{1.426cm}}
\pgfpathcurveto{\pgfqpoint{1.345cm}{1.463cm}}{\pgfqpoint{1.331cm}{1.497cm}}{\pgfqpoint{1.305cm}{1.523cm}}
\pgfpathcurveto{\pgfqpoint{1.28cm}{1.549cm}}{\pgfqpoint{1.245cm}{1.563cm}}{\pgfqpoint{1.209cm}{1.563cm}}
\pgfpathcurveto{\pgfqpoint{1.172cm}{1.563cm}}{\pgfqpoint{1.138cm}{1.549cm}}{\pgfqpoint{1.112cm}{1.523cm}}
\pgfpathcurveto{\pgfqpoint{1.087cm}{1.497cm}}{\pgfqpoint{1.072cm}{1.463cm}}{\pgfqpoint{1.072cm}{1.426cm}}
\pgfpathcurveto{\pgfqpoint{1.072cm}{1.39cm}}{\pgfqpoint{1.087cm}{1.355cm}}{\pgfqpoint{1.112cm}{1.329cm}}
\pgfpathcurveto{\pgfqpoint{1.138cm}{1.304cm}}{\pgfqpoint{1.172cm}{1.289cm}}{\pgfqpoint{1.209cm}{1.289cm}}
\pgfpathcurveto{\pgfqpoint{1.245cm}{1.289cm}}{\pgfqpoint{1.28cm}{1.304cm}}{\pgfqpoint{1.305cm}{1.329cm}}
\pgfpathcurveto{\pgfqpoint{1.331cm}{1.355cm}}{\pgfqpoint{1.345cm}{1.39cm}}{\pgfqpoint{1.345cm}{1.426cm}}
\pgfusepath{fill}
\begin{pgfscope}
\pgfsetdash{}{0cm}
\pgfsetlinewidth{0.818mm}
\pgfsetroundcap
\pgfsetmiterlimit{4.0}
\pgfpathmoveto{\pgfqpoint{0.682cm}{0.726cm}}
\pgfpathlineto{\pgfqpoint{0.682cm}{0.097cm}}
\pgfusepath{stroke}
\end{pgfscope}
\end{pgfscope}
\end{pgfscope}
\end{pgfscope}
\end{tikzpicture}}}, 3 \llbracket X^2 \rrbracket) + 3
    \llbracket X^2 \rrbracket \circ \chi + 3 X \Join (-\lambda X^{\!\resizebox{0.6em}{!}{
\begin{tikzpicture}
\pgfpathmoveto{\pgfqpoint{0cm}{-0.035cm}}
\pgfpathlineto{\pgfqpoint{1.376cm}{-0.035cm}}
\pgfpathlineto{\pgfqpoint{1.376cm}{1.552cm}}
\pgfpathlineto{\pgfqpoint{0cm}{1.552cm}}
\pgfpathclose
\pgfusepath{clip}
\begin{pgfscope}
\begin{pgfscope}
\pgfpathmoveto{\pgfqpoint{0cm}{-0.035cm}}
\pgfpathlineto{\pgfqpoint{1.376cm}{-0.035cm}}
\pgfpathlineto{\pgfqpoint{1.376cm}{1.552cm}}
\pgfpathlineto{\pgfqpoint{0cm}{1.552cm}}
\pgfpathclose
\pgfusepath{clip}
\begin{pgfscope}
\begin{pgfscope}
\pgfsetdash{}{0cm}
\pgfsetlinewidth{0.818mm}
\pgfsetroundcap
\pgfsetroundjoin
\pgfsetmiterlimit{7.0}
\definecolor{eps2pgf_color}{gray}{0}\pgfsetstrokecolor{eps2pgf_color}\pgfsetfillcolor{eps2pgf_color}
\pgfpathmoveto{\pgfqpoint{0.117cm}{1.421cm}}
\pgfpathlineto{\pgfqpoint{0.682cm}{0.671cm}}
\pgfpathlineto{\pgfqpoint{1.246cm}{1.421cm}}
\pgfusepath{stroke}
\end{pgfscope}
\definecolor{eps2pgf_color}{gray}{0}\pgfsetstrokecolor{eps2pgf_color}\pgfsetfillcolor{eps2pgf_color}
\pgfpathmoveto{\pgfqpoint{0.273cm}{1.395cm}}
\pgfpathcurveto{\pgfqpoint{0.273cm}{1.432cm}}{\pgfqpoint{0.259cm}{1.467cm}}{\pgfqpoint{0.233cm}{1.492cm}}
\pgfpathcurveto{\pgfqpoint{0.207cm}{1.518cm}}{\pgfqpoint{0.173cm}{1.532cm}}{\pgfqpoint{0.137cm}{1.532cm}}
\pgfpathcurveto{\pgfqpoint{0.1cm}{1.532cm}}{\pgfqpoint{0.066cm}{1.518cm}}{\pgfqpoint{0.04cm}{1.492cm}}
\pgfpathcurveto{\pgfqpoint{0.014cm}{1.467cm}}{\pgfqpoint{0cm}{1.432cm}}{\pgfqpoint{0cm}{1.395cm}}
\pgfpathcurveto{\pgfqpoint{0cm}{1.359cm}}{\pgfqpoint{0.014cm}{1.324cm}}{\pgfqpoint{0.04cm}{1.299cm}}
\pgfpathcurveto{\pgfqpoint{0.066cm}{1.273cm}}{\pgfqpoint{0.1cm}{1.258cm}}{\pgfqpoint{0.137cm}{1.258cm}}
\pgfpathcurveto{\pgfqpoint{0.173cm}{1.258cm}}{\pgfqpoint{0.207cm}{1.273cm}}{\pgfqpoint{0.233cm}{1.299cm}}
\pgfpathcurveto{\pgfqpoint{0.259cm}{1.324cm}}{\pgfqpoint{0.273cm}{1.359cm}}{\pgfqpoint{0.273cm}{1.395cm}}
\pgfusepath{fill}
\begin{pgfscope}
\pgfsetdash{}{0cm}
\pgfsetlinewidth{0.818mm}
\pgfsetmiterlimit{7.0}
\pgfpathmoveto{\pgfqpoint{0.682cm}{0.671cm}}
\pgfpathlineto{\pgfqpoint{0.679cm}{1.418cm}}
\pgfusepath{stroke}
\end{pgfscope}
\pgfpathmoveto{\pgfqpoint{0.815cm}{1.399cm}}
\pgfpathcurveto{\pgfqpoint{0.815cm}{1.435cm}}{\pgfqpoint{0.801cm}{1.47cm}}{\pgfqpoint{0.775cm}{1.496cm}}
\pgfpathcurveto{\pgfqpoint{0.75cm}{1.521cm}}{\pgfqpoint{0.715cm}{1.536cm}}{\pgfqpoint{0.679cm}{1.536cm}}
\pgfpathcurveto{\pgfqpoint{0.643cm}{1.536cm}}{\pgfqpoint{0.608cm}{1.521cm}}{\pgfqpoint{0.582cm}{1.496cm}}
\pgfpathcurveto{\pgfqpoint{0.557cm}{1.47cm}}{\pgfqpoint{0.542cm}{1.435cm}}{\pgfqpoint{0.542cm}{1.399cm}}
\pgfpathcurveto{\pgfqpoint{0.542cm}{1.363cm}}{\pgfqpoint{0.557cm}{1.328cm}}{\pgfqpoint{0.582cm}{1.302cm}}
\pgfpathcurveto{\pgfqpoint{0.608cm}{1.276cm}}{\pgfqpoint{0.643cm}{1.262cm}}{\pgfqpoint{0.679cm}{1.262cm}}
\pgfpathcurveto{\pgfqpoint{0.715cm}{1.262cm}}{\pgfqpoint{0.75cm}{1.276cm}}{\pgfqpoint{0.775cm}{1.302cm}}
\pgfpathcurveto{\pgfqpoint{0.801cm}{1.328cm}}{\pgfqpoint{0.815cm}{1.363cm}}{\pgfqpoint{0.815cm}{1.399cm}}
\pgfusepath{fill}
\pgfpathmoveto{\pgfqpoint{1.345cm}{1.371cm}}
\pgfpathcurveto{\pgfqpoint{1.345cm}{1.408cm}}{\pgfqpoint{1.331cm}{1.442cm}}{\pgfqpoint{1.305cm}{1.468cm}}
\pgfpathcurveto{\pgfqpoint{1.28cm}{1.494cm}}{\pgfqpoint{1.245cm}{1.508cm}}{\pgfqpoint{1.209cm}{1.508cm}}
\pgfpathcurveto{\pgfqpoint{1.172cm}{1.508cm}}{\pgfqpoint{1.138cm}{1.494cm}}{\pgfqpoint{1.112cm}{1.468cm}}
\pgfpathcurveto{\pgfqpoint{1.087cm}{1.442cm}}{\pgfqpoint{1.072cm}{1.408cm}}{\pgfqpoint{1.072cm}{1.371cm}}
\pgfpathcurveto{\pgfqpoint{1.072cm}{1.335cm}}{\pgfqpoint{1.087cm}{1.3cm}}{\pgfqpoint{1.112cm}{1.274cm}}
\pgfpathcurveto{\pgfqpoint{1.138cm}{1.249cm}}{\pgfqpoint{1.172cm}{1.234cm}}{\pgfqpoint{1.209cm}{1.234cm}}
\pgfpathcurveto{\pgfqpoint{1.245cm}{1.234cm}}{\pgfqpoint{1.28cm}{1.249cm}}{\pgfqpoint{1.305cm}{1.274cm}}
\pgfpathcurveto{\pgfqpoint{1.331cm}{1.3cm}}{\pgfqpoint{1.345cm}{1.335cm}}{\pgfqpoint{1.345cm}{1.371cm}}
\pgfusepath{fill}
\begin{pgfscope}
\pgfsetdash{}{0cm}
\pgfsetlinewidth{0.818mm}
\pgfsetroundcap
\pgfsetmiterlimit{4.0}
\pgfpathmoveto{\pgfqpoint{0.682cm}{0.671cm}}
\pgfpathlineto{\pgfqpoint{0.682cm}{0.042cm}}
\pgfusepath{stroke}
\end{pgfscope}
\end{pgfscope}
\end{pgfscope}
\end{pgfscope}
\end{tikzpicture}}} +
    \zeta)^2 + 6\lambda (\lambdaX^{\!\resizebox{0.6em}{!}{
\begin{tikzpicture}
\pgfpathmoveto{\pgfqpoint{0cm}{-0.035cm}}
\pgfpathlineto{\pgfqpoint{1.376cm}{-0.035cm}}
\pgfpathlineto{\pgfqpoint{1.376cm}{1.552cm}}
\pgfpathlineto{\pgfqpoint{0cm}{1.552cm}}
\pgfpathclose
\pgfusepath{clip}
\begin{pgfscope}
\begin{pgfscope}
\pgfpathmoveto{\pgfqpoint{0cm}{-0.035cm}}
\pgfpathlineto{\pgfqpoint{1.376cm}{-0.035cm}}
\pgfpathlineto{\pgfqpoint{1.376cm}{1.552cm}}
\pgfpathlineto{\pgfqpoint{0cm}{1.552cm}}
\pgfpathclose
\pgfusepath{clip}
\begin{pgfscope}
\begin{pgfscope}
\pgfsetdash{}{0cm}
\pgfsetlinewidth{0.818mm}
\pgfsetroundcap
\pgfsetroundjoin
\pgfsetmiterlimit{7.0}
\definecolor{eps2pgf_color}{gray}{0}\pgfsetstrokecolor{eps2pgf_color}\pgfsetfillcolor{eps2pgf_color}
\pgfpathmoveto{\pgfqpoint{0.117cm}{1.421cm}}
\pgfpathlineto{\pgfqpoint{0.682cm}{0.671cm}}
\pgfpathlineto{\pgfqpoint{1.246cm}{1.421cm}}
\pgfusepath{stroke}
\end{pgfscope}
\definecolor{eps2pgf_color}{gray}{0}\pgfsetstrokecolor{eps2pgf_color}\pgfsetfillcolor{eps2pgf_color}
\pgfpathmoveto{\pgfqpoint{0.273cm}{1.395cm}}
\pgfpathcurveto{\pgfqpoint{0.273cm}{1.432cm}}{\pgfqpoint{0.259cm}{1.467cm}}{\pgfqpoint{0.233cm}{1.492cm}}
\pgfpathcurveto{\pgfqpoint{0.207cm}{1.518cm}}{\pgfqpoint{0.173cm}{1.532cm}}{\pgfqpoint{0.137cm}{1.532cm}}
\pgfpathcurveto{\pgfqpoint{0.1cm}{1.532cm}}{\pgfqpoint{0.066cm}{1.518cm}}{\pgfqpoint{0.04cm}{1.492cm}}
\pgfpathcurveto{\pgfqpoint{0.014cm}{1.467cm}}{\pgfqpoint{0cm}{1.432cm}}{\pgfqpoint{0cm}{1.395cm}}
\pgfpathcurveto{\pgfqpoint{0cm}{1.359cm}}{\pgfqpoint{0.014cm}{1.324cm}}{\pgfqpoint{0.04cm}{1.299cm}}
\pgfpathcurveto{\pgfqpoint{0.066cm}{1.273cm}}{\pgfqpoint{0.1cm}{1.258cm}}{\pgfqpoint{0.137cm}{1.258cm}}
\pgfpathcurveto{\pgfqpoint{0.173cm}{1.258cm}}{\pgfqpoint{0.207cm}{1.273cm}}{\pgfqpoint{0.233cm}{1.299cm}}
\pgfpathcurveto{\pgfqpoint{0.259cm}{1.324cm}}{\pgfqpoint{0.273cm}{1.359cm}}{\pgfqpoint{0.273cm}{1.395cm}}
\pgfusepath{fill}
\begin{pgfscope}
\pgfsetdash{}{0cm}
\pgfsetlinewidth{0.818mm}
\pgfsetmiterlimit{7.0}
\pgfpathmoveto{\pgfqpoint{0.682cm}{0.671cm}}
\pgfpathlineto{\pgfqpoint{0.679cm}{1.418cm}}
\pgfusepath{stroke}
\end{pgfscope}
\pgfpathmoveto{\pgfqpoint{0.815cm}{1.399cm}}
\pgfpathcurveto{\pgfqpoint{0.815cm}{1.435cm}}{\pgfqpoint{0.801cm}{1.47cm}}{\pgfqpoint{0.775cm}{1.496cm}}
\pgfpathcurveto{\pgfqpoint{0.75cm}{1.521cm}}{\pgfqpoint{0.715cm}{1.536cm}}{\pgfqpoint{0.679cm}{1.536cm}}
\pgfpathcurveto{\pgfqpoint{0.643cm}{1.536cm}}{\pgfqpoint{0.608cm}{1.521cm}}{\pgfqpoint{0.582cm}{1.496cm}}
\pgfpathcurveto{\pgfqpoint{0.557cm}{1.47cm}}{\pgfqpoint{0.542cm}{1.435cm}}{\pgfqpoint{0.542cm}{1.399cm}}
\pgfpathcurveto{\pgfqpoint{0.542cm}{1.363cm}}{\pgfqpoint{0.557cm}{1.328cm}}{\pgfqpoint{0.582cm}{1.302cm}}
\pgfpathcurveto{\pgfqpoint{0.608cm}{1.276cm}}{\pgfqpoint{0.643cm}{1.262cm}}{\pgfqpoint{0.679cm}{1.262cm}}
\pgfpathcurveto{\pgfqpoint{0.715cm}{1.262cm}}{\pgfqpoint{0.75cm}{1.276cm}}{\pgfqpoint{0.775cm}{1.302cm}}
\pgfpathcurveto{\pgfqpoint{0.801cm}{1.328cm}}{\pgfqpoint{0.815cm}{1.363cm}}{\pgfqpoint{0.815cm}{1.399cm}}
\pgfusepath{fill}
\pgfpathmoveto{\pgfqpoint{1.345cm}{1.371cm}}
\pgfpathcurveto{\pgfqpoint{1.345cm}{1.408cm}}{\pgfqpoint{1.331cm}{1.442cm}}{\pgfqpoint{1.305cm}{1.468cm}}
\pgfpathcurveto{\pgfqpoint{1.28cm}{1.494cm}}{\pgfqpoint{1.245cm}{1.508cm}}{\pgfqpoint{1.209cm}{1.508cm}}
\pgfpathcurveto{\pgfqpoint{1.172cm}{1.508cm}}{\pgfqpoint{1.138cm}{1.494cm}}{\pgfqpoint{1.112cm}{1.468cm}}
\pgfpathcurveto{\pgfqpoint{1.087cm}{1.442cm}}{\pgfqpoint{1.072cm}{1.408cm}}{\pgfqpoint{1.072cm}{1.371cm}}
\pgfpathcurveto{\pgfqpoint{1.072cm}{1.335cm}}{\pgfqpoint{1.087cm}{1.3cm}}{\pgfqpoint{1.112cm}{1.274cm}}
\pgfpathcurveto{\pgfqpoint{1.138cm}{1.249cm}}{\pgfqpoint{1.172cm}{1.234cm}}{\pgfqpoint{1.209cm}{1.234cm}}
\pgfpathcurveto{\pgfqpoint{1.245cm}{1.234cm}}{\pgfqpoint{1.28cm}{1.249cm}}{\pgfqpoint{1.305cm}{1.274cm}}
\pgfpathcurveto{\pgfqpoint{1.331cm}{1.3cm}}{\pgfqpoint{1.345cm}{1.335cm}}{\pgfqpoint{1.345cm}{1.371cm}}
\pgfusepath{fill}
\begin{pgfscope}
\pgfsetdash{}{0cm}
\pgfsetlinewidth{0.818mm}
\pgfsetroundcap
\pgfsetmiterlimit{4.0}
\pgfpathmoveto{\pgfqpoint{0.682cm}{0.671cm}}
\pgfpathlineto{\pgfqpoint{0.682cm}{0.042cm}}
\pgfusepath{stroke}
\end{pgfscope}
\end{pgfscope}
\end{pgfscope}
\end{pgfscope}
\end{tikzpicture}}} - \zeta) X^{\!\resizebox{!}{.8em}{
\begin{tikzpicture}
\pgfpathmoveto{\pgfqpoint{0cm}{-0.035cm}}
\pgfpathlineto{\pgfqpoint{1.976cm}{-0.035cm}}
\pgfpathlineto{\pgfqpoint{1.976cm}{1.94cm}}
\pgfpathlineto{\pgfqpoint{0cm}{1.94cm}}
\pgfpathclose
\pgfusepath{clip}
\begin{pgfscope}
\begin{pgfscope}
\pgfpathmoveto{\pgfqpoint{0cm}{-0.035cm}}
\pgfpathlineto{\pgfqpoint{1.976cm}{-0.035cm}}
\pgfpathlineto{\pgfqpoint{1.976cm}{1.94cm}}
\pgfpathlineto{\pgfqpoint{0cm}{1.94cm}}
\pgfpathclose
\pgfusepath{clip}
\begin{pgfscope}
\begin{pgfscope}
\pgfsetdash{}{0cm}
\pgfsetlinewidth{0.818mm}
\pgfsetroundcap
\pgfsetroundjoin
\pgfsetmiterlimit{7.0}
\definecolor{eps2pgf_color}{gray}{0}\pgfsetstrokecolor{eps2pgf_color}\pgfsetfillcolor{eps2pgf_color}
\pgfpathmoveto{\pgfqpoint{0.117cm}{1.815cm}}
\pgfpathlineto{\pgfqpoint{0.682cm}{1.065cm}}
\pgfpathlineto{\pgfqpoint{1.246cm}{1.815cm}}
\pgfusepath{stroke}
\end{pgfscope}
\definecolor{eps2pgf_color}{gray}{0}\pgfsetstrokecolor{eps2pgf_color}\pgfsetfillcolor{eps2pgf_color}
\pgfpathmoveto{\pgfqpoint{0.273cm}{1.789cm}}
\pgfpathcurveto{\pgfqpoint{0.273cm}{1.825cm}}{\pgfqpoint{0.259cm}{1.86cm}}{\pgfqpoint{0.233cm}{1.886cm}}
\pgfpathcurveto{\pgfqpoint{0.207cm}{1.912cm}}{\pgfqpoint{0.173cm}{1.926cm}}{\pgfqpoint{0.137cm}{1.926cm}}
\pgfpathcurveto{\pgfqpoint{0.1cm}{1.926cm}}{\pgfqpoint{0.066cm}{1.912cm}}{\pgfqpoint{0.04cm}{1.886cm}}
\pgfpathcurveto{\pgfqpoint{0.014cm}{1.86cm}}{\pgfqpoint{0cm}{1.825cm}}{\pgfqpoint{0cm}{1.789cm}}
\pgfpathcurveto{\pgfqpoint{0cm}{1.753cm}}{\pgfqpoint{0.014cm}{1.718cm}}{\pgfqpoint{0.04cm}{1.692cm}}
\pgfpathcurveto{\pgfqpoint{0.066cm}{1.667cm}}{\pgfqpoint{0.1cm}{1.652cm}}{\pgfqpoint{0.137cm}{1.652cm}}
\pgfpathcurveto{\pgfqpoint{0.173cm}{1.652cm}}{\pgfqpoint{0.207cm}{1.667cm}}{\pgfqpoint{0.233cm}{1.692cm}}
\pgfpathcurveto{\pgfqpoint{0.259cm}{1.718cm}}{\pgfqpoint{0.273cm}{1.753cm}}{\pgfqpoint{0.273cm}{1.789cm}}
\pgfusepath{fill}
\begin{pgfscope}
\pgfsetdash{}{0cm}
\pgfsetlinewidth{0.818mm}
\pgfsetmiterlimit{7.0}
\pgfpathmoveto{\pgfqpoint{0.682cm}{1.065cm}}
\pgfpathlineto{\pgfqpoint{0.679cm}{1.812cm}}
\pgfusepath{stroke}
\end{pgfscope}
\pgfpathmoveto{\pgfqpoint{0.815cm}{1.793cm}}
\pgfpathcurveto{\pgfqpoint{0.815cm}{1.829cm}}{\pgfqpoint{0.801cm}{1.864cm}}{\pgfqpoint{0.775cm}{1.89cm}}
\pgfpathcurveto{\pgfqpoint{0.75cm}{1.915cm}}{\pgfqpoint{0.715cm}{1.93cm}}{\pgfqpoint{0.679cm}{1.93cm}}
\pgfpathcurveto{\pgfqpoint{0.643cm}{1.93cm}}{\pgfqpoint{0.608cm}{1.915cm}}{\pgfqpoint{0.582cm}{1.89cm}}
\pgfpathcurveto{\pgfqpoint{0.557cm}{1.864cm}}{\pgfqpoint{0.542cm}{1.829cm}}{\pgfqpoint{0.542cm}{1.793cm}}
\pgfpathcurveto{\pgfqpoint{0.542cm}{1.756cm}}{\pgfqpoint{0.557cm}{1.722cm}}{\pgfqpoint{0.582cm}{1.696cm}}
\pgfpathcurveto{\pgfqpoint{0.608cm}{1.67cm}}{\pgfqpoint{0.643cm}{1.656cm}}{\pgfqpoint{0.679cm}{1.656cm}}
\pgfpathcurveto{\pgfqpoint{0.715cm}{1.656cm}}{\pgfqpoint{0.75cm}{1.67cm}}{\pgfqpoint{0.775cm}{1.696cm}}
\pgfpathcurveto{\pgfqpoint{0.801cm}{1.722cm}}{\pgfqpoint{0.815cm}{1.756cm}}{\pgfqpoint{0.815cm}{1.793cm}}
\pgfusepath{fill}
\pgfpathmoveto{\pgfqpoint{1.345cm}{1.765cm}}
\pgfpathcurveto{\pgfqpoint{1.345cm}{1.801cm}}{\pgfqpoint{1.331cm}{1.836cm}}{\pgfqpoint{1.305cm}{1.862cm}}
\pgfpathcurveto{\pgfqpoint{1.28cm}{1.887cm}}{\pgfqpoint{1.245cm}{1.902cm}}{\pgfqpoint{1.209cm}{1.902cm}}
\pgfpathcurveto{\pgfqpoint{1.172cm}{1.902cm}}{\pgfqpoint{1.138cm}{1.887cm}}{\pgfqpoint{1.112cm}{1.862cm}}
\pgfpathcurveto{\pgfqpoint{1.087cm}{1.836cm}}{\pgfqpoint{1.072cm}{1.801cm}}{\pgfqpoint{1.072cm}{1.765cm}}
\pgfpathcurveto{\pgfqpoint{1.072cm}{1.728cm}}{\pgfqpoint{1.087cm}{1.694cm}}{\pgfqpoint{1.112cm}{1.668cm}}
\pgfpathcurveto{\pgfqpoint{1.138cm}{1.642cm}}{\pgfqpoint{1.172cm}{1.628cm}}{\pgfqpoint{1.209cm}{1.628cm}}
\pgfpathcurveto{\pgfqpoint{1.245cm}{1.628cm}}{\pgfqpoint{1.28cm}{1.642cm}}{\pgfqpoint{1.305cm}{1.668cm}}
\pgfpathcurveto{\pgfqpoint{1.331cm}{1.694cm}}{\pgfqpoint{1.345cm}{1.728cm}}{\pgfqpoint{1.345cm}{1.765cm}}
\pgfusepath{fill}
\begin{pgfscope}
\pgfsetdash{}{0cm}
\pgfsetlinewidth{0.818mm}
\pgfsetroundcap
\pgfsetroundjoin
\pgfsetmiterlimit{7.0}
\pgfpathmoveto{\pgfqpoint{0.682cm}{1.065cm}}
\pgfpathlineto{\pgfqpoint{1.246cm}{0.315cm}}
\pgfpathlineto{\pgfqpoint{1.811cm}{1.065cm}}
\pgfusepath{stroke}
\end{pgfscope}
\pgfpathmoveto{\pgfqpoint{1.948cm}{1.065cm}}
\pgfpathcurveto{\pgfqpoint{1.948cm}{1.101cm}}{\pgfqpoint{1.933cm}{1.136cm}}{\pgfqpoint{1.907cm}{1.162cm}}
\pgfpathcurveto{\pgfqpoint{1.882cm}{1.187cm}}{\pgfqpoint{1.847cm}{1.202cm}}{\pgfqpoint{1.811cm}{1.202cm}}
\pgfpathcurveto{\pgfqpoint{1.775cm}{1.202cm}}{\pgfqpoint{1.74cm}{1.187cm}}{\pgfqpoint{1.714cm}{1.162cm}}
\pgfpathcurveto{\pgfqpoint{1.689cm}{1.136cm}}{\pgfqpoint{1.674cm}{1.101cm}}{\pgfqpoint{1.674cm}{1.065cm}}
\pgfpathcurveto{\pgfqpoint{1.674cm}{1.029cm}}{\pgfqpoint{1.689cm}{0.994cm}}{\pgfqpoint{1.714cm}{0.968cm}}
\pgfpathcurveto{\pgfqpoint{1.74cm}{0.942cm}}{\pgfqpoint{1.775cm}{0.928cm}}{\pgfqpoint{1.811cm}{0.928cm}}
\pgfpathcurveto{\pgfqpoint{1.847cm}{0.928cm}}{\pgfqpoint{1.882cm}{0.942cm}}{\pgfqpoint{1.907cm}{0.968cm}}
\pgfpathcurveto{\pgfqpoint{1.933cm}{0.994cm}}{\pgfqpoint{1.948cm}{1.029cm}}{\pgfqpoint{1.948cm}{1.065cm}}
\pgfusepath{fill}
\begin{pgfscope}
\pgfsetdash{}{0cm}
\pgfsetlinewidth{0.818mm}
\pgfsetmiterlimit{4.0}
\pgfpathmoveto{\pgfqpoint{1.383cm}{0.178cm}}
\pgfpathcurveto{\pgfqpoint{1.383cm}{0.214cm}}{\pgfqpoint{1.369cm}{0.249cm}}{\pgfqpoint{1.343cm}{0.275cm}}
\pgfpathcurveto{\pgfqpoint{1.317cm}{0.3cm}}{\pgfqpoint{1.283cm}{0.315cm}}{\pgfqpoint{1.246cm}{0.315cm}}
\pgfpathcurveto{\pgfqpoint{1.21cm}{0.315cm}}{\pgfqpoint{1.175cm}{0.3cm}}{\pgfqpoint{1.15cm}{0.275cm}}
\pgfpathcurveto{\pgfqpoint{1.124cm}{0.249cm}}{\pgfqpoint{1.11cm}{0.214cm}}{\pgfqpoint{1.11cm}{0.178cm}}
\pgfpathcurveto{\pgfqpoint{1.11cm}{0.141cm}}{\pgfqpoint{1.124cm}{0.107cm}}{\pgfqpoint{1.15cm}{0.081cm}}
\pgfpathcurveto{\pgfqpoint{1.175cm}{0.055cm}}{\pgfqpoint{1.21cm}{0.041cm}}{\pgfqpoint{1.246cm}{0.041cm}}
\pgfpathcurveto{\pgfqpoint{1.283cm}{0.041cm}}{\pgfqpoint{1.317cm}{0.055cm}}{\pgfqpoint{1.343cm}{0.081cm}}
\pgfpathcurveto{\pgfqpoint{1.369cm}{0.107cm}}{\pgfqpoint{1.383cm}{0.141cm}}{\pgfqpoint{1.383cm}{0.178cm}}
\pgfusepath{stroke}
\end{pgfscope}
\end{pgfscope}
\end{pgfscope}
\end{pgfscope}
\end{tikzpicture}}}\\
    &  & + 6\lambda C (\lambda X^{\!\resizebox{0.6em}{!}{
\begin{tikzpicture}
\pgfpathmoveto{\pgfqpoint{0cm}{-0.035cm}}
\pgfpathlineto{\pgfqpoint{1.376cm}{-0.035cm}}
\pgfpathlineto{\pgfqpoint{1.376cm}{1.552cm}}
\pgfpathlineto{\pgfqpoint{0cm}{1.552cm}}
\pgfpathclose
\pgfusepath{clip}
\begin{pgfscope}
\begin{pgfscope}
\pgfpathmoveto{\pgfqpoint{0cm}{-0.035cm}}
\pgfpathlineto{\pgfqpoint{1.376cm}{-0.035cm}}
\pgfpathlineto{\pgfqpoint{1.376cm}{1.552cm}}
\pgfpathlineto{\pgfqpoint{0cm}{1.552cm}}
\pgfpathclose
\pgfusepath{clip}
\begin{pgfscope}
\begin{pgfscope}
\pgfsetdash{}{0cm}
\pgfsetlinewidth{0.818mm}
\pgfsetroundcap
\pgfsetroundjoin
\pgfsetmiterlimit{7.0}
\definecolor{eps2pgf_color}{gray}{0}\pgfsetstrokecolor{eps2pgf_color}\pgfsetfillcolor{eps2pgf_color}
\pgfpathmoveto{\pgfqpoint{0.117cm}{1.421cm}}
\pgfpathlineto{\pgfqpoint{0.682cm}{0.671cm}}
\pgfpathlineto{\pgfqpoint{1.246cm}{1.421cm}}
\pgfusepath{stroke}
\end{pgfscope}
\definecolor{eps2pgf_color}{gray}{0}\pgfsetstrokecolor{eps2pgf_color}\pgfsetfillcolor{eps2pgf_color}
\pgfpathmoveto{\pgfqpoint{0.273cm}{1.395cm}}
\pgfpathcurveto{\pgfqpoint{0.273cm}{1.432cm}}{\pgfqpoint{0.259cm}{1.467cm}}{\pgfqpoint{0.233cm}{1.492cm}}
\pgfpathcurveto{\pgfqpoint{0.207cm}{1.518cm}}{\pgfqpoint{0.173cm}{1.532cm}}{\pgfqpoint{0.137cm}{1.532cm}}
\pgfpathcurveto{\pgfqpoint{0.1cm}{1.532cm}}{\pgfqpoint{0.066cm}{1.518cm}}{\pgfqpoint{0.04cm}{1.492cm}}
\pgfpathcurveto{\pgfqpoint{0.014cm}{1.467cm}}{\pgfqpoint{0cm}{1.432cm}}{\pgfqpoint{0cm}{1.395cm}}
\pgfpathcurveto{\pgfqpoint{0cm}{1.359cm}}{\pgfqpoint{0.014cm}{1.324cm}}{\pgfqpoint{0.04cm}{1.299cm}}
\pgfpathcurveto{\pgfqpoint{0.066cm}{1.273cm}}{\pgfqpoint{0.1cm}{1.258cm}}{\pgfqpoint{0.137cm}{1.258cm}}
\pgfpathcurveto{\pgfqpoint{0.173cm}{1.258cm}}{\pgfqpoint{0.207cm}{1.273cm}}{\pgfqpoint{0.233cm}{1.299cm}}
\pgfpathcurveto{\pgfqpoint{0.259cm}{1.324cm}}{\pgfqpoint{0.273cm}{1.359cm}}{\pgfqpoint{0.273cm}{1.395cm}}
\pgfusepath{fill}
\begin{pgfscope}
\pgfsetdash{}{0cm}
\pgfsetlinewidth{0.818mm}
\pgfsetmiterlimit{7.0}
\pgfpathmoveto{\pgfqpoint{0.682cm}{0.671cm}}
\pgfpathlineto{\pgfqpoint{0.679cm}{1.418cm}}
\pgfusepath{stroke}
\end{pgfscope}
\pgfpathmoveto{\pgfqpoint{0.815cm}{1.399cm}}
\pgfpathcurveto{\pgfqpoint{0.815cm}{1.435cm}}{\pgfqpoint{0.801cm}{1.47cm}}{\pgfqpoint{0.775cm}{1.496cm}}
\pgfpathcurveto{\pgfqpoint{0.75cm}{1.521cm}}{\pgfqpoint{0.715cm}{1.536cm}}{\pgfqpoint{0.679cm}{1.536cm}}
\pgfpathcurveto{\pgfqpoint{0.643cm}{1.536cm}}{\pgfqpoint{0.608cm}{1.521cm}}{\pgfqpoint{0.582cm}{1.496cm}}
\pgfpathcurveto{\pgfqpoint{0.557cm}{1.47cm}}{\pgfqpoint{0.542cm}{1.435cm}}{\pgfqpoint{0.542cm}{1.399cm}}
\pgfpathcurveto{\pgfqpoint{0.542cm}{1.363cm}}{\pgfqpoint{0.557cm}{1.328cm}}{\pgfqpoint{0.582cm}{1.302cm}}
\pgfpathcurveto{\pgfqpoint{0.608cm}{1.276cm}}{\pgfqpoint{0.643cm}{1.262cm}}{\pgfqpoint{0.679cm}{1.262cm}}
\pgfpathcurveto{\pgfqpoint{0.715cm}{1.262cm}}{\pgfqpoint{0.75cm}{1.276cm}}{\pgfqpoint{0.775cm}{1.302cm}}
\pgfpathcurveto{\pgfqpoint{0.801cm}{1.328cm}}{\pgfqpoint{0.815cm}{1.363cm}}{\pgfqpoint{0.815cm}{1.399cm}}
\pgfusepath{fill}
\pgfpathmoveto{\pgfqpoint{1.345cm}{1.371cm}}
\pgfpathcurveto{\pgfqpoint{1.345cm}{1.408cm}}{\pgfqpoint{1.331cm}{1.442cm}}{\pgfqpoint{1.305cm}{1.468cm}}
\pgfpathcurveto{\pgfqpoint{1.28cm}{1.494cm}}{\pgfqpoint{1.245cm}{1.508cm}}{\pgfqpoint{1.209cm}{1.508cm}}
\pgfpathcurveto{\pgfqpoint{1.172cm}{1.508cm}}{\pgfqpoint{1.138cm}{1.494cm}}{\pgfqpoint{1.112cm}{1.468cm}}
\pgfpathcurveto{\pgfqpoint{1.087cm}{1.442cm}}{\pgfqpoint{1.072cm}{1.408cm}}{\pgfqpoint{1.072cm}{1.371cm}}
\pgfpathcurveto{\pgfqpoint{1.072cm}{1.335cm}}{\pgfqpoint{1.087cm}{1.3cm}}{\pgfqpoint{1.112cm}{1.274cm}}
\pgfpathcurveto{\pgfqpoint{1.138cm}{1.249cm}}{\pgfqpoint{1.172cm}{1.234cm}}{\pgfqpoint{1.209cm}{1.234cm}}
\pgfpathcurveto{\pgfqpoint{1.245cm}{1.234cm}}{\pgfqpoint{1.28cm}{1.249cm}}{\pgfqpoint{1.305cm}{1.274cm}}
\pgfpathcurveto{\pgfqpoint{1.331cm}{1.3cm}}{\pgfqpoint{1.345cm}{1.335cm}}{\pgfqpoint{1.345cm}{1.371cm}}
\pgfusepath{fill}
\begin{pgfscope}
\pgfsetdash{}{0cm}
\pgfsetlinewidth{0.818mm}
\pgfsetroundcap
\pgfsetmiterlimit{4.0}
\pgfpathmoveto{\pgfqpoint{0.682cm}{0.671cm}}
\pgfpathlineto{\pgfqpoint{0.682cm}{0.042cm}}
\pgfusepath{stroke}
\end{pgfscope}
\end{pgfscope}
\end{pgfscope}
\end{pgfscope}
\end{tikzpicture}}} - \zeta, X^{\!\resizebox{0.6em}{!}{
\begin{tikzpicture}
\pgfpathmoveto{\pgfqpoint{0cm}{-0.035cm}}
\pgfpathlineto{\pgfqpoint{1.376cm}{-0.035cm}}
\pgfpathlineto{\pgfqpoint{1.376cm}{1.552cm}}
\pgfpathlineto{\pgfqpoint{0cm}{1.552cm}}
\pgfpathclose
\pgfusepath{clip}
\begin{pgfscope}
\begin{pgfscope}
\pgfpathmoveto{\pgfqpoint{0cm}{-0.035cm}}
\pgfpathlineto{\pgfqpoint{1.376cm}{-0.035cm}}
\pgfpathlineto{\pgfqpoint{1.376cm}{1.552cm}}
\pgfpathlineto{\pgfqpoint{0cm}{1.552cm}}
\pgfpathclose
\pgfusepath{clip}
\begin{pgfscope}
\begin{pgfscope}
\pgfsetdash{}{0cm}
\pgfsetlinewidth{0.818mm}
\pgfsetroundcap
\pgfsetroundjoin
\pgfsetmiterlimit{7.0}
\definecolor{eps2pgf_color}{gray}{0}\pgfsetstrokecolor{eps2pgf_color}\pgfsetfillcolor{eps2pgf_color}
\pgfpathmoveto{\pgfqpoint{0.117cm}{1.421cm}}
\pgfpathlineto{\pgfqpoint{0.682cm}{0.671cm}}
\pgfpathlineto{\pgfqpoint{1.246cm}{1.421cm}}
\pgfusepath{stroke}
\end{pgfscope}
\definecolor{eps2pgf_color}{gray}{0}\pgfsetstrokecolor{eps2pgf_color}\pgfsetfillcolor{eps2pgf_color}
\pgfpathmoveto{\pgfqpoint{0.273cm}{1.395cm}}
\pgfpathcurveto{\pgfqpoint{0.273cm}{1.432cm}}{\pgfqpoint{0.259cm}{1.467cm}}{\pgfqpoint{0.233cm}{1.492cm}}
\pgfpathcurveto{\pgfqpoint{0.207cm}{1.518cm}}{\pgfqpoint{0.173cm}{1.532cm}}{\pgfqpoint{0.137cm}{1.532cm}}
\pgfpathcurveto{\pgfqpoint{0.1cm}{1.532cm}}{\pgfqpoint{0.066cm}{1.518cm}}{\pgfqpoint{0.04cm}{1.492cm}}
\pgfpathcurveto{\pgfqpoint{0.014cm}{1.467cm}}{\pgfqpoint{0cm}{1.432cm}}{\pgfqpoint{0cm}{1.395cm}}
\pgfpathcurveto{\pgfqpoint{0cm}{1.359cm}}{\pgfqpoint{0.014cm}{1.324cm}}{\pgfqpoint{0.04cm}{1.299cm}}
\pgfpathcurveto{\pgfqpoint{0.066cm}{1.273cm}}{\pgfqpoint{0.1cm}{1.258cm}}{\pgfqpoint{0.137cm}{1.258cm}}
\pgfpathcurveto{\pgfqpoint{0.173cm}{1.258cm}}{\pgfqpoint{0.207cm}{1.273cm}}{\pgfqpoint{0.233cm}{1.299cm}}
\pgfpathcurveto{\pgfqpoint{0.259cm}{1.324cm}}{\pgfqpoint{0.273cm}{1.359cm}}{\pgfqpoint{0.273cm}{1.395cm}}
\pgfusepath{fill}
\begin{pgfscope}
\pgfsetdash{}{0cm}
\pgfsetlinewidth{0.818mm}
\pgfsetmiterlimit{7.0}
\pgfpathmoveto{\pgfqpoint{0.682cm}{0.671cm}}
\pgfpathlineto{\pgfqpoint{0.679cm}{1.418cm}}
\pgfusepath{stroke}
\end{pgfscope}
\pgfpathmoveto{\pgfqpoint{0.815cm}{1.399cm}}
\pgfpathcurveto{\pgfqpoint{0.815cm}{1.435cm}}{\pgfqpoint{0.801cm}{1.47cm}}{\pgfqpoint{0.775cm}{1.496cm}}
\pgfpathcurveto{\pgfqpoint{0.75cm}{1.521cm}}{\pgfqpoint{0.715cm}{1.536cm}}{\pgfqpoint{0.679cm}{1.536cm}}
\pgfpathcurveto{\pgfqpoint{0.643cm}{1.536cm}}{\pgfqpoint{0.608cm}{1.521cm}}{\pgfqpoint{0.582cm}{1.496cm}}
\pgfpathcurveto{\pgfqpoint{0.557cm}{1.47cm}}{\pgfqpoint{0.542cm}{1.435cm}}{\pgfqpoint{0.542cm}{1.399cm}}
\pgfpathcurveto{\pgfqpoint{0.542cm}{1.363cm}}{\pgfqpoint{0.557cm}{1.328cm}}{\pgfqpoint{0.582cm}{1.302cm}}
\pgfpathcurveto{\pgfqpoint{0.608cm}{1.276cm}}{\pgfqpoint{0.643cm}{1.262cm}}{\pgfqpoint{0.679cm}{1.262cm}}
\pgfpathcurveto{\pgfqpoint{0.715cm}{1.262cm}}{\pgfqpoint{0.75cm}{1.276cm}}{\pgfqpoint{0.775cm}{1.302cm}}
\pgfpathcurveto{\pgfqpoint{0.801cm}{1.328cm}}{\pgfqpoint{0.815cm}{1.363cm}}{\pgfqpoint{0.815cm}{1.399cm}}
\pgfusepath{fill}
\pgfpathmoveto{\pgfqpoint{1.345cm}{1.371cm}}
\pgfpathcurveto{\pgfqpoint{1.345cm}{1.408cm}}{\pgfqpoint{1.331cm}{1.442cm}}{\pgfqpoint{1.305cm}{1.468cm}}
\pgfpathcurveto{\pgfqpoint{1.28cm}{1.494cm}}{\pgfqpoint{1.245cm}{1.508cm}}{\pgfqpoint{1.209cm}{1.508cm}}
\pgfpathcurveto{\pgfqpoint{1.172cm}{1.508cm}}{\pgfqpoint{1.138cm}{1.494cm}}{\pgfqpoint{1.112cm}{1.468cm}}
\pgfpathcurveto{\pgfqpoint{1.087cm}{1.442cm}}{\pgfqpoint{1.072cm}{1.408cm}}{\pgfqpoint{1.072cm}{1.371cm}}
\pgfpathcurveto{\pgfqpoint{1.072cm}{1.335cm}}{\pgfqpoint{1.087cm}{1.3cm}}{\pgfqpoint{1.112cm}{1.274cm}}
\pgfpathcurveto{\pgfqpoint{1.138cm}{1.249cm}}{\pgfqpoint{1.172cm}{1.234cm}}{\pgfqpoint{1.209cm}{1.234cm}}
\pgfpathcurveto{\pgfqpoint{1.245cm}{1.234cm}}{\pgfqpoint{1.28cm}{1.249cm}}{\pgfqpoint{1.305cm}{1.274cm}}
\pgfpathcurveto{\pgfqpoint{1.331cm}{1.3cm}}{\pgfqpoint{1.345cm}{1.335cm}}{\pgfqpoint{1.345cm}{1.371cm}}
\pgfusepath{fill}
\begin{pgfscope}
\pgfsetdash{}{0cm}
\pgfsetlinewidth{0.818mm}
\pgfsetroundcap
\pgfsetmiterlimit{4.0}
\pgfpathmoveto{\pgfqpoint{0.682cm}{0.671cm}}
\pgfpathlineto{\pgfqpoint{0.682cm}{0.042cm}}
\pgfusepath{stroke}
\end{pgfscope}
\end{pgfscope}
\end{pgfscope}
\end{pgfscope}
\end{tikzpicture}}}, X ) + 3\lambda^2 X
    \circ ( X^{\!\resizebox{0.6em}{!}{
\begin{tikzpicture}
\pgfpathmoveto{\pgfqpoint{0cm}{-0.035cm}}
\pgfpathlineto{\pgfqpoint{1.376cm}{-0.035cm}}
\pgfpathlineto{\pgfqpoint{1.376cm}{1.552cm}}
\pgfpathlineto{\pgfqpoint{0cm}{1.552cm}}
\pgfpathclose
\pgfusepath{clip}
\begin{pgfscope}
\begin{pgfscope}
\pgfpathmoveto{\pgfqpoint{0cm}{-0.035cm}}
\pgfpathlineto{\pgfqpoint{1.376cm}{-0.035cm}}
\pgfpathlineto{\pgfqpoint{1.376cm}{1.552cm}}
\pgfpathlineto{\pgfqpoint{0cm}{1.552cm}}
\pgfpathclose
\pgfusepath{clip}
\begin{pgfscope}
\begin{pgfscope}
\pgfsetdash{}{0cm}
\pgfsetlinewidth{0.818mm}
\pgfsetroundcap
\pgfsetroundjoin
\pgfsetmiterlimit{7.0}
\definecolor{eps2pgf_color}{gray}{0}\pgfsetstrokecolor{eps2pgf_color}\pgfsetfillcolor{eps2pgf_color}
\pgfpathmoveto{\pgfqpoint{0.117cm}{1.421cm}}
\pgfpathlineto{\pgfqpoint{0.682cm}{0.671cm}}
\pgfpathlineto{\pgfqpoint{1.246cm}{1.421cm}}
\pgfusepath{stroke}
\end{pgfscope}
\definecolor{eps2pgf_color}{gray}{0}\pgfsetstrokecolor{eps2pgf_color}\pgfsetfillcolor{eps2pgf_color}
\pgfpathmoveto{\pgfqpoint{0.273cm}{1.395cm}}
\pgfpathcurveto{\pgfqpoint{0.273cm}{1.432cm}}{\pgfqpoint{0.259cm}{1.467cm}}{\pgfqpoint{0.233cm}{1.492cm}}
\pgfpathcurveto{\pgfqpoint{0.207cm}{1.518cm}}{\pgfqpoint{0.173cm}{1.532cm}}{\pgfqpoint{0.137cm}{1.532cm}}
\pgfpathcurveto{\pgfqpoint{0.1cm}{1.532cm}}{\pgfqpoint{0.066cm}{1.518cm}}{\pgfqpoint{0.04cm}{1.492cm}}
\pgfpathcurveto{\pgfqpoint{0.014cm}{1.467cm}}{\pgfqpoint{0cm}{1.432cm}}{\pgfqpoint{0cm}{1.395cm}}
\pgfpathcurveto{\pgfqpoint{0cm}{1.359cm}}{\pgfqpoint{0.014cm}{1.324cm}}{\pgfqpoint{0.04cm}{1.299cm}}
\pgfpathcurveto{\pgfqpoint{0.066cm}{1.273cm}}{\pgfqpoint{0.1cm}{1.258cm}}{\pgfqpoint{0.137cm}{1.258cm}}
\pgfpathcurveto{\pgfqpoint{0.173cm}{1.258cm}}{\pgfqpoint{0.207cm}{1.273cm}}{\pgfqpoint{0.233cm}{1.299cm}}
\pgfpathcurveto{\pgfqpoint{0.259cm}{1.324cm}}{\pgfqpoint{0.273cm}{1.359cm}}{\pgfqpoint{0.273cm}{1.395cm}}
\pgfusepath{fill}
\begin{pgfscope}
\pgfsetdash{}{0cm}
\pgfsetlinewidth{0.818mm}
\pgfsetmiterlimit{7.0}
\pgfpathmoveto{\pgfqpoint{0.682cm}{0.671cm}}
\pgfpathlineto{\pgfqpoint{0.679cm}{1.418cm}}
\pgfusepath{stroke}
\end{pgfscope}
\pgfpathmoveto{\pgfqpoint{0.815cm}{1.399cm}}
\pgfpathcurveto{\pgfqpoint{0.815cm}{1.435cm}}{\pgfqpoint{0.801cm}{1.47cm}}{\pgfqpoint{0.775cm}{1.496cm}}
\pgfpathcurveto{\pgfqpoint{0.75cm}{1.521cm}}{\pgfqpoint{0.715cm}{1.536cm}}{\pgfqpoint{0.679cm}{1.536cm}}
\pgfpathcurveto{\pgfqpoint{0.643cm}{1.536cm}}{\pgfqpoint{0.608cm}{1.521cm}}{\pgfqpoint{0.582cm}{1.496cm}}
\pgfpathcurveto{\pgfqpoint{0.557cm}{1.47cm}}{\pgfqpoint{0.542cm}{1.435cm}}{\pgfqpoint{0.542cm}{1.399cm}}
\pgfpathcurveto{\pgfqpoint{0.542cm}{1.363cm}}{\pgfqpoint{0.557cm}{1.328cm}}{\pgfqpoint{0.582cm}{1.302cm}}
\pgfpathcurveto{\pgfqpoint{0.608cm}{1.276cm}}{\pgfqpoint{0.643cm}{1.262cm}}{\pgfqpoint{0.679cm}{1.262cm}}
\pgfpathcurveto{\pgfqpoint{0.715cm}{1.262cm}}{\pgfqpoint{0.75cm}{1.276cm}}{\pgfqpoint{0.775cm}{1.302cm}}
\pgfpathcurveto{\pgfqpoint{0.801cm}{1.328cm}}{\pgfqpoint{0.815cm}{1.363cm}}{\pgfqpoint{0.815cm}{1.399cm}}
\pgfusepath{fill}
\pgfpathmoveto{\pgfqpoint{1.345cm}{1.371cm}}
\pgfpathcurveto{\pgfqpoint{1.345cm}{1.408cm}}{\pgfqpoint{1.331cm}{1.442cm}}{\pgfqpoint{1.305cm}{1.468cm}}
\pgfpathcurveto{\pgfqpoint{1.28cm}{1.494cm}}{\pgfqpoint{1.245cm}{1.508cm}}{\pgfqpoint{1.209cm}{1.508cm}}
\pgfpathcurveto{\pgfqpoint{1.172cm}{1.508cm}}{\pgfqpoint{1.138cm}{1.494cm}}{\pgfqpoint{1.112cm}{1.468cm}}
\pgfpathcurveto{\pgfqpoint{1.087cm}{1.442cm}}{\pgfqpoint{1.072cm}{1.408cm}}{\pgfqpoint{1.072cm}{1.371cm}}
\pgfpathcurveto{\pgfqpoint{1.072cm}{1.335cm}}{\pgfqpoint{1.087cm}{1.3cm}}{\pgfqpoint{1.112cm}{1.274cm}}
\pgfpathcurveto{\pgfqpoint{1.138cm}{1.249cm}}{\pgfqpoint{1.172cm}{1.234cm}}{\pgfqpoint{1.209cm}{1.234cm}}
\pgfpathcurveto{\pgfqpoint{1.245cm}{1.234cm}}{\pgfqpoint{1.28cm}{1.249cm}}{\pgfqpoint{1.305cm}{1.274cm}}
\pgfpathcurveto{\pgfqpoint{1.331cm}{1.3cm}}{\pgfqpoint{1.345cm}{1.335cm}}{\pgfqpoint{1.345cm}{1.371cm}}
\pgfusepath{fill}
\begin{pgfscope}
\pgfsetdash{}{0cm}
\pgfsetlinewidth{0.818mm}
\pgfsetroundcap
\pgfsetmiterlimit{4.0}
\pgfpathmoveto{\pgfqpoint{0.682cm}{0.671cm}}
\pgfpathlineto{\pgfqpoint{0.682cm}{0.042cm}}
\pgfusepath{stroke}
\end{pgfscope}
\end{pgfscope}
\end{pgfscope}
\end{pgfscope}
\end{tikzpicture}}} \circ X^{\!\resizebox{0.6em}{!}{
\begin{tikzpicture}
\pgfpathmoveto{\pgfqpoint{0cm}{-0.035cm}}
\pgfpathlineto{\pgfqpoint{1.376cm}{-0.035cm}}
\pgfpathlineto{\pgfqpoint{1.376cm}{1.552cm}}
\pgfpathlineto{\pgfqpoint{0cm}{1.552cm}}
\pgfpathclose
\pgfusepath{clip}
\begin{pgfscope}
\begin{pgfscope}
\pgfpathmoveto{\pgfqpoint{0cm}{-0.035cm}}
\pgfpathlineto{\pgfqpoint{1.376cm}{-0.035cm}}
\pgfpathlineto{\pgfqpoint{1.376cm}{1.552cm}}
\pgfpathlineto{\pgfqpoint{0cm}{1.552cm}}
\pgfpathclose
\pgfusepath{clip}
\begin{pgfscope}
\begin{pgfscope}
\pgfsetdash{}{0cm}
\pgfsetlinewidth{0.818mm}
\pgfsetroundcap
\pgfsetroundjoin
\pgfsetmiterlimit{7.0}
\definecolor{eps2pgf_color}{gray}{0}\pgfsetstrokecolor{eps2pgf_color}\pgfsetfillcolor{eps2pgf_color}
\pgfpathmoveto{\pgfqpoint{0.117cm}{1.421cm}}
\pgfpathlineto{\pgfqpoint{0.682cm}{0.671cm}}
\pgfpathlineto{\pgfqpoint{1.246cm}{1.421cm}}
\pgfusepath{stroke}
\end{pgfscope}
\definecolor{eps2pgf_color}{gray}{0}\pgfsetstrokecolor{eps2pgf_color}\pgfsetfillcolor{eps2pgf_color}
\pgfpathmoveto{\pgfqpoint{0.273cm}{1.395cm}}
\pgfpathcurveto{\pgfqpoint{0.273cm}{1.432cm}}{\pgfqpoint{0.259cm}{1.467cm}}{\pgfqpoint{0.233cm}{1.492cm}}
\pgfpathcurveto{\pgfqpoint{0.207cm}{1.518cm}}{\pgfqpoint{0.173cm}{1.532cm}}{\pgfqpoint{0.137cm}{1.532cm}}
\pgfpathcurveto{\pgfqpoint{0.1cm}{1.532cm}}{\pgfqpoint{0.066cm}{1.518cm}}{\pgfqpoint{0.04cm}{1.492cm}}
\pgfpathcurveto{\pgfqpoint{0.014cm}{1.467cm}}{\pgfqpoint{0cm}{1.432cm}}{\pgfqpoint{0cm}{1.395cm}}
\pgfpathcurveto{\pgfqpoint{0cm}{1.359cm}}{\pgfqpoint{0.014cm}{1.324cm}}{\pgfqpoint{0.04cm}{1.299cm}}
\pgfpathcurveto{\pgfqpoint{0.066cm}{1.273cm}}{\pgfqpoint{0.1cm}{1.258cm}}{\pgfqpoint{0.137cm}{1.258cm}}
\pgfpathcurveto{\pgfqpoint{0.173cm}{1.258cm}}{\pgfqpoint{0.207cm}{1.273cm}}{\pgfqpoint{0.233cm}{1.299cm}}
\pgfpathcurveto{\pgfqpoint{0.259cm}{1.324cm}}{\pgfqpoint{0.273cm}{1.359cm}}{\pgfqpoint{0.273cm}{1.395cm}}
\pgfusepath{fill}
\begin{pgfscope}
\pgfsetdash{}{0cm}
\pgfsetlinewidth{0.818mm}
\pgfsetmiterlimit{7.0}
\pgfpathmoveto{\pgfqpoint{0.682cm}{0.671cm}}
\pgfpathlineto{\pgfqpoint{0.679cm}{1.418cm}}
\pgfusepath{stroke}
\end{pgfscope}
\pgfpathmoveto{\pgfqpoint{0.815cm}{1.399cm}}
\pgfpathcurveto{\pgfqpoint{0.815cm}{1.435cm}}{\pgfqpoint{0.801cm}{1.47cm}}{\pgfqpoint{0.775cm}{1.496cm}}
\pgfpathcurveto{\pgfqpoint{0.75cm}{1.521cm}}{\pgfqpoint{0.715cm}{1.536cm}}{\pgfqpoint{0.679cm}{1.536cm}}
\pgfpathcurveto{\pgfqpoint{0.643cm}{1.536cm}}{\pgfqpoint{0.608cm}{1.521cm}}{\pgfqpoint{0.582cm}{1.496cm}}
\pgfpathcurveto{\pgfqpoint{0.557cm}{1.47cm}}{\pgfqpoint{0.542cm}{1.435cm}}{\pgfqpoint{0.542cm}{1.399cm}}
\pgfpathcurveto{\pgfqpoint{0.542cm}{1.363cm}}{\pgfqpoint{0.557cm}{1.328cm}}{\pgfqpoint{0.582cm}{1.302cm}}
\pgfpathcurveto{\pgfqpoint{0.608cm}{1.276cm}}{\pgfqpoint{0.643cm}{1.262cm}}{\pgfqpoint{0.679cm}{1.262cm}}
\pgfpathcurveto{\pgfqpoint{0.715cm}{1.262cm}}{\pgfqpoint{0.75cm}{1.276cm}}{\pgfqpoint{0.775cm}{1.302cm}}
\pgfpathcurveto{\pgfqpoint{0.801cm}{1.328cm}}{\pgfqpoint{0.815cm}{1.363cm}}{\pgfqpoint{0.815cm}{1.399cm}}
\pgfusepath{fill}
\pgfpathmoveto{\pgfqpoint{1.345cm}{1.371cm}}
\pgfpathcurveto{\pgfqpoint{1.345cm}{1.408cm}}{\pgfqpoint{1.331cm}{1.442cm}}{\pgfqpoint{1.305cm}{1.468cm}}
\pgfpathcurveto{\pgfqpoint{1.28cm}{1.494cm}}{\pgfqpoint{1.245cm}{1.508cm}}{\pgfqpoint{1.209cm}{1.508cm}}
\pgfpathcurveto{\pgfqpoint{1.172cm}{1.508cm}}{\pgfqpoint{1.138cm}{1.494cm}}{\pgfqpoint{1.112cm}{1.468cm}}
\pgfpathcurveto{\pgfqpoint{1.087cm}{1.442cm}}{\pgfqpoint{1.072cm}{1.408cm}}{\pgfqpoint{1.072cm}{1.371cm}}
\pgfpathcurveto{\pgfqpoint{1.072cm}{1.335cm}}{\pgfqpoint{1.087cm}{1.3cm}}{\pgfqpoint{1.112cm}{1.274cm}}
\pgfpathcurveto{\pgfqpoint{1.138cm}{1.249cm}}{\pgfqpoint{1.172cm}{1.234cm}}{\pgfqpoint{1.209cm}{1.234cm}}
\pgfpathcurveto{\pgfqpoint{1.245cm}{1.234cm}}{\pgfqpoint{1.28cm}{1.249cm}}{\pgfqpoint{1.305cm}{1.274cm}}
\pgfpathcurveto{\pgfqpoint{1.331cm}{1.3cm}}{\pgfqpoint{1.345cm}{1.335cm}}{\pgfqpoint{1.345cm}{1.371cm}}
\pgfusepath{fill}
\begin{pgfscope}
\pgfsetdash{}{0cm}
\pgfsetlinewidth{0.818mm}
\pgfsetroundcap
\pgfsetmiterlimit{4.0}
\pgfpathmoveto{\pgfqpoint{0.682cm}{0.671cm}}
\pgfpathlineto{\pgfqpoint{0.682cm}{0.042cm}}
\pgfusepath{stroke}
\end{pgfscope}
\end{pgfscope}
\end{pgfscope}
\end{pgfscope}
\end{tikzpicture}}} ) - 6\lambda X \circ (
    X^{\!\resizebox{0.6em}{!}{
\begin{tikzpicture}
\pgfpathmoveto{\pgfqpoint{0cm}{-0.035cm}}
\pgfpathlineto{\pgfqpoint{1.376cm}{-0.035cm}}
\pgfpathlineto{\pgfqpoint{1.376cm}{1.552cm}}
\pgfpathlineto{\pgfqpoint{0cm}{1.552cm}}
\pgfpathclose
\pgfusepath{clip}
\begin{pgfscope}
\begin{pgfscope}
\pgfpathmoveto{\pgfqpoint{0cm}{-0.035cm}}
\pgfpathlineto{\pgfqpoint{1.376cm}{-0.035cm}}
\pgfpathlineto{\pgfqpoint{1.376cm}{1.552cm}}
\pgfpathlineto{\pgfqpoint{0cm}{1.552cm}}
\pgfpathclose
\pgfusepath{clip}
\begin{pgfscope}
\begin{pgfscope}
\pgfsetdash{}{0cm}
\pgfsetlinewidth{0.818mm}
\pgfsetroundcap
\pgfsetroundjoin
\pgfsetmiterlimit{7.0}
\definecolor{eps2pgf_color}{gray}{0}\pgfsetstrokecolor{eps2pgf_color}\pgfsetfillcolor{eps2pgf_color}
\pgfpathmoveto{\pgfqpoint{0.117cm}{1.421cm}}
\pgfpathlineto{\pgfqpoint{0.682cm}{0.671cm}}
\pgfpathlineto{\pgfqpoint{1.246cm}{1.421cm}}
\pgfusepath{stroke}
\end{pgfscope}
\definecolor{eps2pgf_color}{gray}{0}\pgfsetstrokecolor{eps2pgf_color}\pgfsetfillcolor{eps2pgf_color}
\pgfpathmoveto{\pgfqpoint{0.273cm}{1.395cm}}
\pgfpathcurveto{\pgfqpoint{0.273cm}{1.432cm}}{\pgfqpoint{0.259cm}{1.467cm}}{\pgfqpoint{0.233cm}{1.492cm}}
\pgfpathcurveto{\pgfqpoint{0.207cm}{1.518cm}}{\pgfqpoint{0.173cm}{1.532cm}}{\pgfqpoint{0.137cm}{1.532cm}}
\pgfpathcurveto{\pgfqpoint{0.1cm}{1.532cm}}{\pgfqpoint{0.066cm}{1.518cm}}{\pgfqpoint{0.04cm}{1.492cm}}
\pgfpathcurveto{\pgfqpoint{0.014cm}{1.467cm}}{\pgfqpoint{0cm}{1.432cm}}{\pgfqpoint{0cm}{1.395cm}}
\pgfpathcurveto{\pgfqpoint{0cm}{1.359cm}}{\pgfqpoint{0.014cm}{1.324cm}}{\pgfqpoint{0.04cm}{1.299cm}}
\pgfpathcurveto{\pgfqpoint{0.066cm}{1.273cm}}{\pgfqpoint{0.1cm}{1.258cm}}{\pgfqpoint{0.137cm}{1.258cm}}
\pgfpathcurveto{\pgfqpoint{0.173cm}{1.258cm}}{\pgfqpoint{0.207cm}{1.273cm}}{\pgfqpoint{0.233cm}{1.299cm}}
\pgfpathcurveto{\pgfqpoint{0.259cm}{1.324cm}}{\pgfqpoint{0.273cm}{1.359cm}}{\pgfqpoint{0.273cm}{1.395cm}}
\pgfusepath{fill}
\begin{pgfscope}
\pgfsetdash{}{0cm}
\pgfsetlinewidth{0.818mm}
\pgfsetmiterlimit{7.0}
\pgfpathmoveto{\pgfqpoint{0.682cm}{0.671cm}}
\pgfpathlineto{\pgfqpoint{0.679cm}{1.418cm}}
\pgfusepath{stroke}
\end{pgfscope}
\pgfpathmoveto{\pgfqpoint{0.815cm}{1.399cm}}
\pgfpathcurveto{\pgfqpoint{0.815cm}{1.435cm}}{\pgfqpoint{0.801cm}{1.47cm}}{\pgfqpoint{0.775cm}{1.496cm}}
\pgfpathcurveto{\pgfqpoint{0.75cm}{1.521cm}}{\pgfqpoint{0.715cm}{1.536cm}}{\pgfqpoint{0.679cm}{1.536cm}}
\pgfpathcurveto{\pgfqpoint{0.643cm}{1.536cm}}{\pgfqpoint{0.608cm}{1.521cm}}{\pgfqpoint{0.582cm}{1.496cm}}
\pgfpathcurveto{\pgfqpoint{0.557cm}{1.47cm}}{\pgfqpoint{0.542cm}{1.435cm}}{\pgfqpoint{0.542cm}{1.399cm}}
\pgfpathcurveto{\pgfqpoint{0.542cm}{1.363cm}}{\pgfqpoint{0.557cm}{1.328cm}}{\pgfqpoint{0.582cm}{1.302cm}}
\pgfpathcurveto{\pgfqpoint{0.608cm}{1.276cm}}{\pgfqpoint{0.643cm}{1.262cm}}{\pgfqpoint{0.679cm}{1.262cm}}
\pgfpathcurveto{\pgfqpoint{0.715cm}{1.262cm}}{\pgfqpoint{0.75cm}{1.276cm}}{\pgfqpoint{0.775cm}{1.302cm}}
\pgfpathcurveto{\pgfqpoint{0.801cm}{1.328cm}}{\pgfqpoint{0.815cm}{1.363cm}}{\pgfqpoint{0.815cm}{1.399cm}}
\pgfusepath{fill}
\pgfpathmoveto{\pgfqpoint{1.345cm}{1.371cm}}
\pgfpathcurveto{\pgfqpoint{1.345cm}{1.408cm}}{\pgfqpoint{1.331cm}{1.442cm}}{\pgfqpoint{1.305cm}{1.468cm}}
\pgfpathcurveto{\pgfqpoint{1.28cm}{1.494cm}}{\pgfqpoint{1.245cm}{1.508cm}}{\pgfqpoint{1.209cm}{1.508cm}}
\pgfpathcurveto{\pgfqpoint{1.172cm}{1.508cm}}{\pgfqpoint{1.138cm}{1.494cm}}{\pgfqpoint{1.112cm}{1.468cm}}
\pgfpathcurveto{\pgfqpoint{1.087cm}{1.442cm}}{\pgfqpoint{1.072cm}{1.408cm}}{\pgfqpoint{1.072cm}{1.371cm}}
\pgfpathcurveto{\pgfqpoint{1.072cm}{1.335cm}}{\pgfqpoint{1.087cm}{1.3cm}}{\pgfqpoint{1.112cm}{1.274cm}}
\pgfpathcurveto{\pgfqpoint{1.138cm}{1.249cm}}{\pgfqpoint{1.172cm}{1.234cm}}{\pgfqpoint{1.209cm}{1.234cm}}
\pgfpathcurveto{\pgfqpoint{1.245cm}{1.234cm}}{\pgfqpoint{1.28cm}{1.249cm}}{\pgfqpoint{1.305cm}{1.274cm}}
\pgfpathcurveto{\pgfqpoint{1.331cm}{1.3cm}}{\pgfqpoint{1.345cm}{1.335cm}}{\pgfqpoint{1.345cm}{1.371cm}}
\pgfusepath{fill}
\begin{pgfscope}
\pgfsetdash{}{0cm}
\pgfsetlinewidth{0.818mm}
\pgfsetroundcap
\pgfsetmiterlimit{4.0}
\pgfpathmoveto{\pgfqpoint{0.682cm}{0.671cm}}
\pgfpathlineto{\pgfqpoint{0.682cm}{0.042cm}}
\pgfusepath{stroke}
\end{pgfscope}
\end{pgfscope}
\end{pgfscope}
\end{pgfscope}
\end{tikzpicture}}} \preccurlyeq \zeta ) + 3 X \circ \zeta^2\\
    & &+ (-\lambda
    X^{\!\resizebox{0.6em}{!}{
\begin{tikzpicture}
\pgfpathmoveto{\pgfqpoint{0cm}{-0.035cm}}
\pgfpathlineto{\pgfqpoint{1.376cm}{-0.035cm}}
\pgfpathlineto{\pgfqpoint{1.376cm}{1.552cm}}
\pgfpathlineto{\pgfqpoint{0cm}{1.552cm}}
\pgfpathclose
\pgfusepath{clip}
\begin{pgfscope}
\begin{pgfscope}
\pgfpathmoveto{\pgfqpoint{0cm}{-0.035cm}}
\pgfpathlineto{\pgfqpoint{1.376cm}{-0.035cm}}
\pgfpathlineto{\pgfqpoint{1.376cm}{1.552cm}}
\pgfpathlineto{\pgfqpoint{0cm}{1.552cm}}
\pgfpathclose
\pgfusepath{clip}
\begin{pgfscope}
\begin{pgfscope}
\pgfsetdash{}{0cm}
\pgfsetlinewidth{0.818mm}
\pgfsetroundcap
\pgfsetroundjoin
\pgfsetmiterlimit{7.0}
\definecolor{eps2pgf_color}{gray}{0}\pgfsetstrokecolor{eps2pgf_color}\pgfsetfillcolor{eps2pgf_color}
\pgfpathmoveto{\pgfqpoint{0.117cm}{1.421cm}}
\pgfpathlineto{\pgfqpoint{0.682cm}{0.671cm}}
\pgfpathlineto{\pgfqpoint{1.246cm}{1.421cm}}
\pgfusepath{stroke}
\end{pgfscope}
\definecolor{eps2pgf_color}{gray}{0}\pgfsetstrokecolor{eps2pgf_color}\pgfsetfillcolor{eps2pgf_color}
\pgfpathmoveto{\pgfqpoint{0.273cm}{1.395cm}}
\pgfpathcurveto{\pgfqpoint{0.273cm}{1.432cm}}{\pgfqpoint{0.259cm}{1.467cm}}{\pgfqpoint{0.233cm}{1.492cm}}
\pgfpathcurveto{\pgfqpoint{0.207cm}{1.518cm}}{\pgfqpoint{0.173cm}{1.532cm}}{\pgfqpoint{0.137cm}{1.532cm}}
\pgfpathcurveto{\pgfqpoint{0.1cm}{1.532cm}}{\pgfqpoint{0.066cm}{1.518cm}}{\pgfqpoint{0.04cm}{1.492cm}}
\pgfpathcurveto{\pgfqpoint{0.014cm}{1.467cm}}{\pgfqpoint{0cm}{1.432cm}}{\pgfqpoint{0cm}{1.395cm}}
\pgfpathcurveto{\pgfqpoint{0cm}{1.359cm}}{\pgfqpoint{0.014cm}{1.324cm}}{\pgfqpoint{0.04cm}{1.299cm}}
\pgfpathcurveto{\pgfqpoint{0.066cm}{1.273cm}}{\pgfqpoint{0.1cm}{1.258cm}}{\pgfqpoint{0.137cm}{1.258cm}}
\pgfpathcurveto{\pgfqpoint{0.173cm}{1.258cm}}{\pgfqpoint{0.207cm}{1.273cm}}{\pgfqpoint{0.233cm}{1.299cm}}
\pgfpathcurveto{\pgfqpoint{0.259cm}{1.324cm}}{\pgfqpoint{0.273cm}{1.359cm}}{\pgfqpoint{0.273cm}{1.395cm}}
\pgfusepath{fill}
\begin{pgfscope}
\pgfsetdash{}{0cm}
\pgfsetlinewidth{0.818mm}
\pgfsetmiterlimit{7.0}
\pgfpathmoveto{\pgfqpoint{0.682cm}{0.671cm}}
\pgfpathlineto{\pgfqpoint{0.679cm}{1.418cm}}
\pgfusepath{stroke}
\end{pgfscope}
\pgfpathmoveto{\pgfqpoint{0.815cm}{1.399cm}}
\pgfpathcurveto{\pgfqpoint{0.815cm}{1.435cm}}{\pgfqpoint{0.801cm}{1.47cm}}{\pgfqpoint{0.775cm}{1.496cm}}
\pgfpathcurveto{\pgfqpoint{0.75cm}{1.521cm}}{\pgfqpoint{0.715cm}{1.536cm}}{\pgfqpoint{0.679cm}{1.536cm}}
\pgfpathcurveto{\pgfqpoint{0.643cm}{1.536cm}}{\pgfqpoint{0.608cm}{1.521cm}}{\pgfqpoint{0.582cm}{1.496cm}}
\pgfpathcurveto{\pgfqpoint{0.557cm}{1.47cm}}{\pgfqpoint{0.542cm}{1.435cm}}{\pgfqpoint{0.542cm}{1.399cm}}
\pgfpathcurveto{\pgfqpoint{0.542cm}{1.363cm}}{\pgfqpoint{0.557cm}{1.328cm}}{\pgfqpoint{0.582cm}{1.302cm}}
\pgfpathcurveto{\pgfqpoint{0.608cm}{1.276cm}}{\pgfqpoint{0.643cm}{1.262cm}}{\pgfqpoint{0.679cm}{1.262cm}}
\pgfpathcurveto{\pgfqpoint{0.715cm}{1.262cm}}{\pgfqpoint{0.75cm}{1.276cm}}{\pgfqpoint{0.775cm}{1.302cm}}
\pgfpathcurveto{\pgfqpoint{0.801cm}{1.328cm}}{\pgfqpoint{0.815cm}{1.363cm}}{\pgfqpoint{0.815cm}{1.399cm}}
\pgfusepath{fill}
\pgfpathmoveto{\pgfqpoint{1.345cm}{1.371cm}}
\pgfpathcurveto{\pgfqpoint{1.345cm}{1.408cm}}{\pgfqpoint{1.331cm}{1.442cm}}{\pgfqpoint{1.305cm}{1.468cm}}
\pgfpathcurveto{\pgfqpoint{1.28cm}{1.494cm}}{\pgfqpoint{1.245cm}{1.508cm}}{\pgfqpoint{1.209cm}{1.508cm}}
\pgfpathcurveto{\pgfqpoint{1.172cm}{1.508cm}}{\pgfqpoint{1.138cm}{1.494cm}}{\pgfqpoint{1.112cm}{1.468cm}}
\pgfpathcurveto{\pgfqpoint{1.087cm}{1.442cm}}{\pgfqpoint{1.072cm}{1.408cm}}{\pgfqpoint{1.072cm}{1.371cm}}
\pgfpathcurveto{\pgfqpoint{1.072cm}{1.335cm}}{\pgfqpoint{1.087cm}{1.3cm}}{\pgfqpoint{1.112cm}{1.274cm}}
\pgfpathcurveto{\pgfqpoint{1.138cm}{1.249cm}}{\pgfqpoint{1.172cm}{1.234cm}}{\pgfqpoint{1.209cm}{1.234cm}}
\pgfpathcurveto{\pgfqpoint{1.245cm}{1.234cm}}{\pgfqpoint{1.28cm}{1.249cm}}{\pgfqpoint{1.305cm}{1.274cm}}
\pgfpathcurveto{\pgfqpoint{1.331cm}{1.3cm}}{\pgfqpoint{1.345cm}{1.335cm}}{\pgfqpoint{1.345cm}{1.371cm}}
\pgfusepath{fill}
\begin{pgfscope}
\pgfsetdash{}{0cm}
\pgfsetlinewidth{0.818mm}
\pgfsetroundcap
\pgfsetmiterlimit{4.0}
\pgfpathmoveto{\pgfqpoint{0.682cm}{0.671cm}}
\pgfpathlineto{\pgfqpoint{0.682cm}{0.042cm}}
\pgfusepath{stroke}
\end{pgfscope}
\end{pgfscope}
\end{pgfscope}
\end{pgfscope}
\end{tikzpicture}}} + \zeta)^3,
  \end{array} \label{eq:phi3}
\end{equation}
where we used the notation $f \Join g = f \prec g + f \succ g$ and $\zeta,
\phi, Y$ are defined as starting from $(\varphi, \mathbb{X}) = \Psi (\varphi,
X)$ as
\[ \varphi = X -\lambda X^{\!\resizebox{0.6em}{!}{
\begin{tikzpicture}
\pgfpathmoveto{\pgfqpoint{0cm}{-0.035cm}}
\pgfpathlineto{\pgfqpoint{1.376cm}{-0.035cm}}
\pgfpathlineto{\pgfqpoint{1.376cm}{1.552cm}}
\pgfpathlineto{\pgfqpoint{0cm}{1.552cm}}
\pgfpathclose
\pgfusepath{clip}
\begin{pgfscope}
\begin{pgfscope}
\pgfpathmoveto{\pgfqpoint{0cm}{-0.035cm}}
\pgfpathlineto{\pgfqpoint{1.376cm}{-0.035cm}}
\pgfpathlineto{\pgfqpoint{1.376cm}{1.552cm}}
\pgfpathlineto{\pgfqpoint{0cm}{1.552cm}}
\pgfpathclose
\pgfusepath{clip}
\begin{pgfscope}
\begin{pgfscope}
\pgfsetdash{}{0cm}
\pgfsetlinewidth{0.818mm}
\pgfsetroundcap
\pgfsetroundjoin
\pgfsetmiterlimit{7.0}
\definecolor{eps2pgf_color}{gray}{0}\pgfsetstrokecolor{eps2pgf_color}\pgfsetfillcolor{eps2pgf_color}
\pgfpathmoveto{\pgfqpoint{0.117cm}{1.421cm}}
\pgfpathlineto{\pgfqpoint{0.682cm}{0.671cm}}
\pgfpathlineto{\pgfqpoint{1.246cm}{1.421cm}}
\pgfusepath{stroke}
\end{pgfscope}
\definecolor{eps2pgf_color}{gray}{0}\pgfsetstrokecolor{eps2pgf_color}\pgfsetfillcolor{eps2pgf_color}
\pgfpathmoveto{\pgfqpoint{0.273cm}{1.395cm}}
\pgfpathcurveto{\pgfqpoint{0.273cm}{1.432cm}}{\pgfqpoint{0.259cm}{1.467cm}}{\pgfqpoint{0.233cm}{1.492cm}}
\pgfpathcurveto{\pgfqpoint{0.207cm}{1.518cm}}{\pgfqpoint{0.173cm}{1.532cm}}{\pgfqpoint{0.137cm}{1.532cm}}
\pgfpathcurveto{\pgfqpoint{0.1cm}{1.532cm}}{\pgfqpoint{0.066cm}{1.518cm}}{\pgfqpoint{0.04cm}{1.492cm}}
\pgfpathcurveto{\pgfqpoint{0.014cm}{1.467cm}}{\pgfqpoint{0cm}{1.432cm}}{\pgfqpoint{0cm}{1.395cm}}
\pgfpathcurveto{\pgfqpoint{0cm}{1.359cm}}{\pgfqpoint{0.014cm}{1.324cm}}{\pgfqpoint{0.04cm}{1.299cm}}
\pgfpathcurveto{\pgfqpoint{0.066cm}{1.273cm}}{\pgfqpoint{0.1cm}{1.258cm}}{\pgfqpoint{0.137cm}{1.258cm}}
\pgfpathcurveto{\pgfqpoint{0.173cm}{1.258cm}}{\pgfqpoint{0.207cm}{1.273cm}}{\pgfqpoint{0.233cm}{1.299cm}}
\pgfpathcurveto{\pgfqpoint{0.259cm}{1.324cm}}{\pgfqpoint{0.273cm}{1.359cm}}{\pgfqpoint{0.273cm}{1.395cm}}
\pgfusepath{fill}
\begin{pgfscope}
\pgfsetdash{}{0cm}
\pgfsetlinewidth{0.818mm}
\pgfsetmiterlimit{7.0}
\pgfpathmoveto{\pgfqpoint{0.682cm}{0.671cm}}
\pgfpathlineto{\pgfqpoint{0.679cm}{1.418cm}}
\pgfusepath{stroke}
\end{pgfscope}
\pgfpathmoveto{\pgfqpoint{0.815cm}{1.399cm}}
\pgfpathcurveto{\pgfqpoint{0.815cm}{1.435cm}}{\pgfqpoint{0.801cm}{1.47cm}}{\pgfqpoint{0.775cm}{1.496cm}}
\pgfpathcurveto{\pgfqpoint{0.75cm}{1.521cm}}{\pgfqpoint{0.715cm}{1.536cm}}{\pgfqpoint{0.679cm}{1.536cm}}
\pgfpathcurveto{\pgfqpoint{0.643cm}{1.536cm}}{\pgfqpoint{0.608cm}{1.521cm}}{\pgfqpoint{0.582cm}{1.496cm}}
\pgfpathcurveto{\pgfqpoint{0.557cm}{1.47cm}}{\pgfqpoint{0.542cm}{1.435cm}}{\pgfqpoint{0.542cm}{1.399cm}}
\pgfpathcurveto{\pgfqpoint{0.542cm}{1.363cm}}{\pgfqpoint{0.557cm}{1.328cm}}{\pgfqpoint{0.582cm}{1.302cm}}
\pgfpathcurveto{\pgfqpoint{0.608cm}{1.276cm}}{\pgfqpoint{0.643cm}{1.262cm}}{\pgfqpoint{0.679cm}{1.262cm}}
\pgfpathcurveto{\pgfqpoint{0.715cm}{1.262cm}}{\pgfqpoint{0.75cm}{1.276cm}}{\pgfqpoint{0.775cm}{1.302cm}}
\pgfpathcurveto{\pgfqpoint{0.801cm}{1.328cm}}{\pgfqpoint{0.815cm}{1.363cm}}{\pgfqpoint{0.815cm}{1.399cm}}
\pgfusepath{fill}
\pgfpathmoveto{\pgfqpoint{1.345cm}{1.371cm}}
\pgfpathcurveto{\pgfqpoint{1.345cm}{1.408cm}}{\pgfqpoint{1.331cm}{1.442cm}}{\pgfqpoint{1.305cm}{1.468cm}}
\pgfpathcurveto{\pgfqpoint{1.28cm}{1.494cm}}{\pgfqpoint{1.245cm}{1.508cm}}{\pgfqpoint{1.209cm}{1.508cm}}
\pgfpathcurveto{\pgfqpoint{1.172cm}{1.508cm}}{\pgfqpoint{1.138cm}{1.494cm}}{\pgfqpoint{1.112cm}{1.468cm}}
\pgfpathcurveto{\pgfqpoint{1.087cm}{1.442cm}}{\pgfqpoint{1.072cm}{1.408cm}}{\pgfqpoint{1.072cm}{1.371cm}}
\pgfpathcurveto{\pgfqpoint{1.072cm}{1.335cm}}{\pgfqpoint{1.087cm}{1.3cm}}{\pgfqpoint{1.112cm}{1.274cm}}
\pgfpathcurveto{\pgfqpoint{1.138cm}{1.249cm}}{\pgfqpoint{1.172cm}{1.234cm}}{\pgfqpoint{1.209cm}{1.234cm}}
\pgfpathcurveto{\pgfqpoint{1.245cm}{1.234cm}}{\pgfqpoint{1.28cm}{1.249cm}}{\pgfqpoint{1.305cm}{1.274cm}}
\pgfpathcurveto{\pgfqpoint{1.331cm}{1.3cm}}{\pgfqpoint{1.345cm}{1.335cm}}{\pgfqpoint{1.345cm}{1.371cm}}
\pgfusepath{fill}
\begin{pgfscope}
\pgfsetdash{}{0cm}
\pgfsetlinewidth{0.818mm}
\pgfsetroundcap
\pgfsetmiterlimit{4.0}
\pgfpathmoveto{\pgfqpoint{0.682cm}{0.671cm}}
\pgfpathlineto{\pgfqpoint{0.682cm}{0.042cm}}
\pgfusepath{stroke}
\end{pgfscope}
\end{pgfscope}
\end{pgfscope}
\end{pgfscope}
\end{tikzpicture}}} + \zeta, \qquad \zeta = - \LL^{- 1} [ 3\lambda
   ( \UU_{>} \llbracket X^2 \rrbracket ) \succ Y ] + \phi ,
\]
the operator $C$ is the continuum analog of the commutator $C_{\varepsilon}$ defined in \eqref{eq:ce}, the localizer $\UU_{>}$ is given by the constant $L_{0}$ from Lemma \ref{lem:Y1} and $B(\cdot)$ (appearing also in the limit $Z$, cf. \eqref{eq:def-Z}) is the uniform  limit of $b_{M,\varepsilon}-\tilde{b}_{M,\varepsilon}(\cdot)$ on $[\tau,T]$.
Let us denote $H (\varphi, X) \assign \llbracket \varphi^3 \rrbracket - \llbracket
X^3 \rrbracket$.

\

Remark that our uniform bounds remain valid for the limiting measure $\mu$.
As a consequence we obtain the following result.

\begin{lemma}
  \label{lemma:IF}Let $F : \mathcal{S}' (\mathbb{R}^3) \rightarrow \mathbb{R}$
  be a cylinder function such that
  \[ | F (\varphi) | + \| \mathD F (\varphi) \|_{B_{\infty, \infty}^{1 + 3
     \kappa} (\rho^{- 4 - \sigma})} \leqslant C_F \| \varphi \|_{H^{- 1 / 2 -2
     \kappa} (\rho^2)}^n \]
  for some $n \in \mathbb{N}$. Let $\mu$ be an accumulation point of the
  sequence of laws of $(\mathcal{E}^{\varepsilon} \varphi_{M, \varepsilon},
  \mathcal{E}^{\varepsilon} X_{M, \varepsilon})$. Then  (along a
  subsequence)  $\mathcal{E}^{\varepsilon} \mathcal{J}_{M, \varepsilon}
  (F) \rightarrow \mathcal{J}_{\mu} (F)$ in $\mathcal{S}' (\mathbb{R}^d)$,
  where $\mathcal{J}_{\mu} (F)$ is given by
  \[ \mathcal{J}_{\mu} (F) =\mathbb{E}_{\mu} \left[ \int_{\mathbb{R}} h (t) F
     (\varphi (t)) \llbracket X^3 \rrbracket (t) \mathd t \right]
     +\mathbb{E}_{\mu} \left[ \int_{\mathbb{R}} h (t) F (\varphi (t)) H
     (\varphi, X) (t) \mathd t \right] \backassign \mathcal{J}^X_{\mu} (F)
     +\mathcal{J}^H_{\mu} (F), \]
for any function $h$ as above, which is understood as an equality of distributions and the expectation is in the weak sense. Moreover, we have the estimate
  \[ \| \mathcal{J}^X_{\mu} (F) \|_{\CC^{- 3 / 2 - \kappa} (\rho^{\sigma})} +
     \| \mathcal{J}^H_{\mu} (F) \|_{B_{1, 1}^{- 1 - 3 \kappa} (\rho^{4+\sigma})}
     \lesssim_{\mu, h} C_F \]
  where the implicit constant depends on $\mu, h$ but not on $F$.
\end{lemma}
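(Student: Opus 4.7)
The plan is to use the stationarity of $\varphi_{M,\varepsilon}$ in time to lift $\mathcal{J}_{M,\varepsilon}(F)$ to the time-averaged expression
\[
\mathcal{J}_{M,\varepsilon}(F)(x) = \mathbb{E}\!\left[\int_{\mathbb{R}} h(t)\,F(\mathcal{E}^{\varepsilon}\varphi_{M,\varepsilon}(t))\,\llbracket\varphi_{M,\varepsilon}^3(t)\rrbracket(x)\,\mathd t\right],
\]
which is legitimate since $\int h=1$ and $t\mapsto \mathbb{E}[F(\mathcal{E}^\varepsilon\varphi_{M,\varepsilon}(t))\,\llbracket\varphi_{M,\varepsilon}^3(t)\rrbracket]$ is independent of $t$ by stationarity. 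The decomposition of $\llbracket\varphi_{M,\varepsilon}^3\rrbracket$ already carried out splits it as $\llbracket X_{M,\varepsilon}^3\rrbracket + H_{\varepsilon}(\varphi_{M,\varepsilon},\mathbb{X}_{M,\varepsilon})$, producing the two candidate contributions $\mathcal{J}^X_{M,\varepsilon}(F)$ and $\mathcal{J}^H_{M,\varepsilon}(F)$.

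Next, I would pass to the limit along the subsequence provided by Corollary \ref{cor:t}, which gives tightness of the joint laws of $(\mathcal{E}^\varepsilon\varphi_{M,\varepsilon},\mathcal{E}^\varepsilon\mathbb{X}_{M,\varepsilon})$ in $W^{\alpha,1}_{\rm loc}B^{-1-4\kappa}_{1,1}(\rho^{4+\sigma})\times C^{\kappa/2}_{\rm loc}\mathcal{X}$. Invoking Skorohod's representation, one may assume almost sure convergence to $(\varphi,\mathbb{X})$. Since every term entering $H_\varepsilon$ is obtained from $(\varphi_{M,\varepsilon},\mathbb{X}_{M,\varepsilon})$ through continuous paraproducts, resonant products and commutators on the chosen function spaces (with operator norms uniform in $\varepsilon,M$ by Lemma~\ref{lem:equiv} and the machinery developed in Appendix~\ref{s:app}), we obtain $\mathcal{E}^\varepsilon H_\varepsilon(\varphi_{M,\varepsilon},\mathbb{X}_{M,\varepsilon})\to H(\varphi,X)$ in $L^1_T B^{-1-3\kappa-\kappa'}_{1,1}(\rho^{4+\sigma'})$ for slightly worse parameters, and $\mathcal{E}^\varepsilon\llbracket X^3_{M,\varepsilon}\rrbracket\to \llbracket X^3\rrbracket$ in $W^{-\kappa,\infty}_T\mathscr{C}^{-3/2-\kappa-\kappa'}(\rho^{\sigma'})$. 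Uniform integrability, needed to exchange almost sure convergence and the outer expectation, comes from the uniform bounds in Theorem~\ref{thm:phitight} (which give moments of all orders for $\varphi$ in the relevant norm), together with the hypothesis $|F(\varphi)|\leqslant C_F\|\varphi\|^n_{H^{-1/2-2\kappa}(\rho^2)}$; one applies Vitali's theorem after testing against a fixed $g\in\mathcal{S}(\mathbb{R}^3)$ in the $x$-variable.

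The norm estimates then follow by duality. For $\mathcal{J}^H_\mu(F)$, pairing with $g\in B^{1+3\kappa}_{\infty,\infty}(\rho^{-4-\sigma})$ gives
\[
|\langle\mathcal{J}^H_\mu(F),g\rangle| \leqslant \|g\|_{B^{1+3\kappa}_{\infty,\infty}(\rho^{-4-\sigma})}\,\mathbb{E}_\mu\!\int |h(t)|\,|F(\varphi(t))|\,\|H(\varphi,X)(t)\|_{B^{-1-3\kappa}_{1,1}(\rho^{4+\sigma})}\,\mathd t,
\]
and Cauchy--Schwarz combined with the polynomial bound on $F$, the stationary moment estimate of Theorem~\ref{thm:phitight} and the uniform $L^1_T$-$B^{-1-3\kappa}_{1,1}$ bound on $H$ yield the claimed estimate. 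For $\mathcal{J}^X_\mu(F)$, the pairing with $g\in\mathscr{C}^{3/2+\kappa}_c(\rho^{-\sigma})$ is rewritten as a time-duality between $\llbracket X^3\rrbracket\in W^{-\kappa,\infty}_T\mathscr{C}^{-3/2-\kappa}(\rho^\sigma)$ and $t\mapsto h(t)F(\varphi(t))$; the latter lies in $W^{\beta,1}_T$ with $\beta>\kappa$ as a consequence of the bound $|F(\varphi(t))-F(\varphi(s))|\leqslant C_F(1+\|\varphi\|^n)\|\varphi(t)-\varphi(s)\|_{B^{-1-3\kappa}_{1,1}(\rho^{4+\sigma})}$ (obtained via the hypothesis on $\mathrm{D}F$ and the duality Lemma~\ref{lem:dual2}) together with $\varphi\in W^{\beta,1}_T B^{-1-3\kappa}_{1,1}(\rho^{4+\sigma})$ from Theorem~\ref{thm:phitight}.

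The main obstacle is the handling of the term involving $\llbracket X^3\rrbracket$: because it has strictly negative time regularity and therefore does not admit a pointwise time evaluation under $\mu$, the smooth weight $h$ cannot be removed, and one must verify that the auxiliary factor $F(\varphi(t))$ possesses enough fractional time regularity to act as a legitimate test function against $\llbracket X^3\rrbracket$. This is precisely where the hypothesis on $\mathrm{D}F$ and the time regularity established in Theorem~\ref{thm:phitight} play a decisive role.
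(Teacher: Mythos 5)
Your proposal follows essentially the same route as the paper: the same lift to a time-averaged expression by stationarity, the same decomposition $\llbracket\varphi^3\rrbracket = \llbracket X^3\rrbracket + H$, the same duality pairing for the $\mathcal{J}^H_\mu$ bound (dualizing $B^{-1-3\kappa}_{1,1}(\rho^{4+\sigma})$ against $B^{1+3\kappa}_{\infty,\infty}(\rho^{-4-\sigma})$ and using Cauchy--Schwarz together with $\operatorname{supp} h\subset[\tau,T]$), and the same $W^{\kappa,1}_T$-against-$W^{-\kappa,\infty}_T$ time duality for the $\mathcal{J}^X_\mu$ bound, with the fractional time regularity of $t\mapsto F(\varphi(t))$ supplied by the hypothesis on $\mathrm{D}F$ and Theorem~\ref{thm:phitight}. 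The only detail you leave implicit is the double H\"older manipulation needed to absorb the polynomial growth of $F$ and $\mathrm{D}F$ (the paper takes expectations with conjugate exponents $p,p'$ and then again $q,q'$ before invoking the uniform moment bounds and the Sobolev embedding $W^{\beta,1}_T\subset W^{\alpha,pq}_T$), but the structure of the argument is the same.
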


\begin{proof}
  For any cylinder function $F$ satisfying the assumptions and since
  $\tmop{supp} h \in [\tau, T]$ we have the following estimate for arbitrary
  conjugate exponents $p, p' \in (1, \infty)$
  \[ \| \mathcal{J}^X_{\mu} (F) \|_{\CC^{- 3 / 2 - \kappa} (\rho^{\sigma})}
     \lesssim_h \mathbb{E}_{\mu} \left[ \| t \mapsto F (\varphi (t))
     \|_{W^{\kappa, 1}_T} \| \llbracket X^3 \rrbracket \|_{W^{- \kappa,
     \infty}_T \CC^{- 3 / 2 - \kappa} (\rho^{\sigma})} \right] \]
  \[ \lesssim (\mathbb{E}_{\mu} [\| t \mapsto F (\varphi (t)) \|_{W^{\kappa,
     1}_T}^p])^{1 / p}  \left( \mathbb{E}_{\mu} \left[ \| \llbracket X^3
     \rrbracket \|_{W^{- \kappa, \infty}_T \CC^{- 3 / 2 - \kappa}
     (\rho^{\sigma})}^{p'} \right] \right)^{1 / p'} \]
  \[ \lesssim (\mathbb{E}_{\mu} [\| t \mapsto F (\varphi (t)) \|_{W^{\kappa,
     1}_T}^p])^{1 / p} \lesssim \left( \int_{[0, T]^2} \frac{\mathbb{E}_{\mu}
     | F (\varphi (t)) - F (\varphi (s)) |^p}{| t - s |^{(1 + \kappa) p}}
     \mathd t \mathd s \right)^{1 / p} . \]
  Since for arbitrary conjugate exponents $q, q' \in (1, \infty)$
  \[ \mathbb{E}_{\mu} | F (\varphi (t)) - F (\varphi (s)) |^p \leqslant
     \int_0^1 \mathbb{E}_{\mu} | \langle \mathD F (\varphi (s) + \tau (\varphi
     (t) - \varphi (s))), \varphi (t) - \varphi (s) \rangle |^p \mathd \tau \]
  \[ \leqslant \int_0^1 \mathd \tau (\mathbb{E}_{\mu} \| \mathD F (\varphi (s)
     + \tau (\varphi (t) - \varphi (s))) \|^{p q'}_{B_{\infty, \infty}^{1 + 3
     \kappa} (\rho^{- 4 - \sigma})})^{1 / q'}  (\mathbb{E}_{\mu} \| \varphi
     (t) - \varphi (s) \|^{p q}_{B_{1, 1}^{- 1 - 3 \kappa} (\rho^{4 +
     \sigma})})^{1 / q} \]
  \[ \lesssim C^p_F (\mathbb{E}_{\mu} \| \varphi (0) \|_{H^{- 1 / 2 -2 \kappa}
     (\rho^2)}^{n p q'})^{1 / q'} (\mathbb{E}_{\mu} \| \varphi (t) - \varphi
     (s) \|^{p q}_{B_{1, 1}^{- 1 - 3 \kappa} (\rho^{4 + \sigma})})^{1 / q}, \]
  we obtain due to Theorem~\ref{thm:tight} that
  \[ \| \mathcal{J}^X_{\mu} (F) \|_{\CC^{- 3 / 2 - \kappa} (\rho^{\sigma})}
     \lesssim C_F \left( \int_{[0, T]^2} \frac{\mathbb{E}_{\mu} \| \varphi (t)
     - \varphi (s) \|^{p q}_{B_{1, 1}^{- 1 - 3 \kappa} (\rho^{4 +
     \sigma})}}{| t - s |^{(1 + \kappa) p q}} \mathd t \mathd s \right)^{1
     / (p q)} \]
  \[ \lesssim C_F (\mathbb{E}_{\mu} \| \varphi \|^{p q}_{W_T^{\alpha, p q}
     B_{1, 1}^{- 1 - 3 \kappa} (\rho^{4 + \sigma})})^{1 / (p q)}, \]
  where $\alpha = 1 + \kappa - 1 / (p q)$. Finally, choosing $p, q \in (1,
  \infty)$ sufficiently small and $\kappa \in (0, 1)$ appropriately, we may
  apply the Sobolev embedding $W_T^{\beta, 1} \subset W_T^{\alpha, p q}$
  together with the uniform bound from Theorem~\ref{thm:phitight} (which
  remains valid in the limit) to deduce
  \[ \| \mathcal{J}^X_{\mu} (F) \|_{\CC^{- 3 / 2 - \kappa} (\rho^{\sigma})}
     \lesssim C_F (\mathbb{E}_{\mu} \| \varphi \|^{p q}_{W_T^{\beta, 1} B_{1,
     1}^{- 1 - 3 \kappa} (\rho^{4 + \sigma})})^{1 / (p q)} \lesssim C_F . \]

  To show the second bound in the statement of the lemma, we use the fact that
  $\tmop{supp} h \subset [\tau, T]$ for some $0 < \tau < T < \infty$ to
  estimate
  \[ \| \mathcal{J}^H_{\mu} (F) \|_{B_{1, 1}^{- 1 - 3 \kappa} (\rho^{4 +
     \sigma})} \leqslant \mathbb{E}_{\mu} [\| t \mapsto F (\varphi (t))
     \|_{L^{\infty}_{\tau, T}} \| H (\varphi, X) \|_{L^1_T B_{1, 1}^{- 1 - 3
     \kappa} (\rho^{4 + \sigma})}] \]
  \[ \leqslant C_F (\mathbb{E}_{\mu} \| \varphi \|_{L^{\infty}_{\tau, T} H^{-
     1 / 2 - 2\kappa} (\rho^2)}^{2 n})^{1 / 2} (\mathbb{E}_{\mu} \| H (\varphi,
     X) \|^2_{L^1_T B_{1, 1}^{- 1 - 3 \kappa} (\rho^{4 + \sigma})})^{1 / 2}
     \lesssim C_F, \]
  where the last inequality follows from Theorem~\ref{thm:phitight} and the
  bounds in the proof of Proposition~\ref{prop:reg}.
\end{proof}

\

Heuristically we can think of $\mathcal{J}_{\mu} (F)$ as given by
\[ \mathcal{J}_{\mu} (F) \approx \int F (\varphi) \llbracket \varphi^3
   \rrbracket (0) \nu (\mathd \varphi) . \]
However, as we have seen above, this expression is purely formal since
$\llbracket \varphi^3 \rrbracket$ is only a space-time distribution with
respect to $\mu$ and therefore $\llbracket \varphi^3 \rrbracket (0)$ is not a
well defined random variable. One has to consider $F \mapsto \mathcal{J}_{\mu}
(F)$ as a linear functional on cylinder functions taking values in
$\mathcal{S}' (\mathbb{R}^3)$ and satisfying the above properties.
Lemma~\ref{lemma:IF} presents a concrete probabilistic representation based on
the stationary stochastic quantization dynamics of the $\Phi^{4_{}}_3$
measure.

\

Alternatively, the distribution $\mathcal{J}_{\mu} (F)$ can be characterized
in terms of $\varphi (0)$ without using the dynamics, in particular, in the
spirit of the operator product expansion as follows.

\begin{lemma}\label{lem:OPE}
  Let $F$ be a cylinder function as in Lemma~\ref{lemma:IF} and $\nu$ the
  first marginal of $\mu$. Then there exists a sequence of constants $(c_N)_{N
  \in \mathbb{N}}$ tending to $\infty$ as $N \rightarrow \infty$ such that
  \[ \mathcal{J}_{\mu} (F) = \lim_{N \rightarrow \infty} \int F (\varphi)
     [(\Delta_{\leqslant N} \varphi)^3 - c_N (\Delta_{\leqslant N} \varphi)]
     \nu (\mathd \varphi) \]
     in the sense of distributions. Moreover, the renormalization constants are given by
     $$
     c_{N}=3\lambda\mathbb{E}\big[\llbracket (\Delta_{\leqslant N}X)^{2}\rrbracket(t,0)\big]-18\lambda^{2}\mathbb{E}\big[\llbracket (\Delta_{\leqslant N}X)^{2}\rrbracket\circ\Q^{-1}\llbracket (\Delta_{\leqslant N}X)^{2}\rrbracket(t,0)\big],
     $$
     for some $t\geqslant 0$, where
     $$
     \llbracket (\Delta_{\leqslant N}X)^{2}\rrbracket=(\Delta_{\leqslant N}X)^{2}-\mathbb{E}\big[\llbracket (\Delta_{\leqslant N}X)^{2}\rrbracket(t,0)\big].
     $$
\end{lemma}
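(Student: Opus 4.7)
The plan is to exploit the dynamic/stationary representation of $\mathcal{J}_{\mu}(F)$ already obtained in Lemma~\ref{lemma:IF} and compare it, term by term, with the operator product expansion $(\Delta_{\leqslant N}\varphi)^{3}-c_{N}\Delta_{\leqslant N}\varphi$. First, since $h$ has $\int h=1$ and $\tmop{supp} h \subset[\tau,T]$, and since the functional $\varphi\mapsto F(\varphi)[(\Delta_{\leqslant N}\varphi)^{3}-c_{N}\Delta_{\leqslant N}\varphi]$ is a well-defined cylinder function of $\varphi(t)$ for every finite $N$, I would use the stationarity of the process $\varphi$ under $\mu$ (discussed after Corollary~\ref{cor:t}) to rewrite
\begin{equation*}
\int F(\varphi)\big[(\Delta_{\leqslant N}\varphi)^{3}-c_{N}\Delta_{\leqslant N}\varphi\big]\,\nu(\mathd\varphi)=\mathbb{E}_{\mu}\!\int_{\mathbb{R}} h(t)F(\varphi(t))\big[(\Delta_{\leqslant N}\varphi(t))^{3}-c_{N}\Delta_{\leqslant N}\varphi(t)\big]\mathd t.
\end{equation*}
Comparing this with the formula for $\mathcal{J}_{\mu}(F)$ in Lemma~\ref{lemma:IF}, the problem reduces to showing that, as $N\to\infty$,
$(\Delta_{\leqslant N}\varphi)^{3}-c_{N}\Delta_{\leqslant N}\varphi \to \llbracket X^{3}\rrbracket+H(\varphi,X)$
in a space-time distributional sense compatible with the duality used there, namely strongly enough in $L^{1}_{T}B^{-1-3\kappa}_{1,1}(\rho^{4+\sigma})+W^{-\kappa,\infty}_{T}\CC^{-3/2-\kappa}(\rho^{\sigma})$ to integrate against $h(t)F(\varphi(t))$.

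Next, I would insert the decomposition $\varphi=X-\lambda\tthreeone{X}+\zeta$ inside $(\Delta_{\leqslant N}\varphi)^{3}$ and expand into monomials in $\Delta_{\leqslant N}X$, $\Delta_{\leqslant N}\tthreeone{X}$, $\Delta_{\leqslant N}\zeta$. For each monomial I would further split every product into paraproduct and resonant parts and identify the two kinds of divergences appearing as $N\to\infty$:
\begin{itemize}
\item The contact divergence $a^{(N)}\assign\mathbb{E}[(\Delta_{\leqslant N}X)^{2}]$ coming from $(\Delta_{\leqslant N}X)^{3}$ and from $3(\Delta_{\leqslant N}X)^{2}\Delta_{\leqslant N}(-\lambda\tthreeone{X}+\zeta)$, which combine to produce $3\lambda a^{(N)}\Delta_{\leqslant N}\varphi$ after regrouping the non-Wick parts.
\item The logarithmic divergence $b^{(N)}\assign\mathbb{E}[\llbracket(\Delta_{\leqslant N}X)^{2}\rrbracket\circ\Q^{-1}\llbracket(\Delta_{\leqslant N}X)^{2}\rrbracket]$ hidden in the term $-6\lambda\,\Delta_{\leqslant N}X\circ(\Delta_{\leqslant N}\tthreeone{X}\preccurlyeq\Delta_{\leqslant N}\zeta)$ and in $3\llbracket(\Delta_{\leqslant N}X)^{2}\rrbracket\circ\Delta_{\leqslant N}\phi$ via the paracontrolled ansatz $\phi=-3\lambda\ttwoone{X}\succ\phi+\chi$, which together yield a diverging multiple of $\Delta_{\leqslant N}\varphi$ of the form $-18\lambda^{2}b^{(N)}\Delta_{\leqslant N}\varphi$.
\end{itemize}
With $c_{N}\assign 3\lambda a^{(N)}-18\lambda^{2}b^{(N)}$ the divergences cancel, giving exactly the renormalization constants in the statement. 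The remaining non-divergent pieces are the regularized versions of the stochastic objects $\llbracket X^{3}\rrbracket,\llbracket X^{2}\rrbracket,\tthreetwor{X},\ttwothreer{\tilde X},\tthreethreer{X}$ and of the paraproducts/resonant products building $H(\varphi,X)$ in~\eqref{eq:phi3}.

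Finally, I would pass to the limit in $N$. For the purely stochastic contributions this follows from the standard convergence of the mollified stochastic objects to those in $\mathbb{X}$ in the topologies used in Section~\ref{ssec:stoch}; for the mixed contributions involving $\zeta,\phi,\chi$ it follows from the continuity of the para- and resonant products together with the uniform regularity estimates of Proposition~\ref{prop:reg} and Theorem~\ref{thm:phitight} (which remain valid under $\mu$), using dominated convergence in time against $h$. The main obstacle I anticipate is the rigorous control of the term $\llbracket X^{3}\rrbracket$: it is only a space-time distribution of negative time regularity, so its regularized version $\llbracket(\Delta_{\leqslant N}X)^{3}\rrbracket$ must be paired with $t\mapsto F(\varphi(t))$ using the same $W^{\kappa,1}_{T}$-type estimate employed in the proof of Lemma~\ref{lemma:IF}; the convergence of this pairing requires combining the uniform bound $\|\llbracket X^{3}\rrbracket\|_{W^{-\kappa,\infty}_{T}\CC^{-3/2-\kappa}(\rho^{\sigma})}<\infty$ with the uniform time-regularity of $\varphi$ provided by Theorem~\ref{thm:phitight}, and then applying a standard density/commutation argument between $\Delta_{\leqslant N}$ and the Littlewood--Paley blocks.
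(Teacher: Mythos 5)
Your proposal takes essentially the same route as the paper's (very terse) proof: rewrite $\int F\,[(\Delta_{\leqslant N}\varphi)^3-c_N\Delta_{\leqslant N}\varphi]\,\nu(\mathd\varphi)$ via stationarity of $\varphi$ under $\mu$ as a time-averaged expectation, identify constants $c_N$ so that $(\Delta_{\leqslant N}\varphi)^3-c_N\Delta_{\leqslant N}\varphi\to\llbracket\varphi^3\rrbracket=\llbracket X^3\rrbracket+H(\varphi,X)$, and then pass to the limit using the available moment control. Where the paper simply asserts that one can ``find suitable constants which deliver the appropriate renormalizations,'' your proposal actually supplies the missing substance: the monomial expansion via $\varphi=X-\lambda\tthreeone X+\zeta$, the identification of the two divergences (the contact one from unordered squares of $X$ and the logarithmic one from the resonant product with the paracontrolled remainder), and the pairing of $\llbracket X^3\rrbracket$ against $t\mapsto F(\varphi(t))$ via the $W^{\kappa,1}_T$ argument already used in Lemma~\ref{lemma:IF}. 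One small bookkeeping remark: when the non-Wick parts of $(\Delta_{\leqslant N}X)^3$ and $3(\Delta_{\leqslant N}X)^2\Delta_{\leqslant N}(-\lambda\tthreeone X+\zeta)$ are regrouped, the contact counterterm that appears naturally is $3a^{(N)}\Delta_{\leqslant N}\varphi$ (no extra $\lambda$); the $\lambda$-factors and numerical prefactors in $c_N$ should therefore be tracked against the convention that $\mathcal{J}_\mu(F)$ is paired with $2\lambda$ in Theorem~\ref{thm:ibp} and against the definitions $a_{M,\varepsilon}=\mathbb{E}[X^2_{M,\varepsilon}]$, $\ttwothreer{X_{M,\varepsilon}}=9\llbracket X^2\rrbracket\circ\Q^{-1}\llbracket X^2\rrbracket-3b_{M,\varepsilon}$. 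This is a normalization issue rather than a gap in the argument; the structure of your derivation is correct.
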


\begin{proof}
  Let
  \[ \mathcal{J}_{\nu, N} (F) \assign \int F (\varphi) [(\Delta_{\leqslant N}
     \varphi)^3 - c_N (\Delta_{\leqslant N} \varphi)] \nu (\mathd \varphi) .
  \]
  Then by stationarity of $\varphi$ under $\mu$ we have for a function $h$
  satisfying the above properties
  \[ \mathcal{J}_{\nu, N} (F) =\mathbb{E}_{\mu} \left[ \int_{\mathbb{R}} h (t)
     F (\varphi (t)) [(\Delta_{\leqslant N} \varphi (t))^3 - c_N
     (\Delta_{\leqslant N} \varphi (t))] \mathd t \right] . \]
  At this point is not difficult to proceed as above and find suitable
  constants $(c_N)_{N \in \mathbb{N}}$ which deliver the appropriate
  renormalizations so that
  \[ [(\Delta_{\leqslant N} \varphi)^3 - c_N (\Delta_{\leqslant N} \varphi)]
     \rightarrow \llbracket \varphi^3 \rrbracket, \]
  and therefore, using the control of the moments, prove that
  \[ \mathcal{J}_{\nu, N} (F) \rightarrow \mathbb{E}_{\mu} \left[
     \int_{\mathbb{R}} h (t) F (\varphi (t)) \llbracket \varphi^3 \rrbracket
     (t) \mathd t \right] =\mathcal{J}_{\mu} (F) . \]
\end{proof}

\begin{remark}
  By the previous lemma it is now clear that $\mathcal{J}_{\mu}$ does not
  depends on $\mu$ but only on its first marginal $\nu$. So in the following
  we will write $\mathcal{J}_{\nu} \assign \mathcal{J}_{\mu}$ to stress this
  fact.
\end{remark}

Using these informations we can pass to the limit in the approximate
integration by parts formula {\eqref{eq:ibp1}} and obtain an integration by
parts formula for the $\Phi^4_3$ measure in the full space. This is the main
result of this section.

\begin{theorem}
  \label{thm:ibp}Any accumulation point $\nu$ of the sequence $(\nu_{M,
  \varepsilon} \circ (\mathcal{E}^{\varepsilon})^{- 1})_{M, \varepsilon}$
  satisfies
  \begin{equation}
    \int \mathD F (\varphi) \nu (\mathd \varphi) = 2 \int [(m^2 - \Delta)
    \varphi] F (\varphi) \nu (\mathd \varphi) + 2\lambda \mathcal{J}_{\nu} (F)
    \label{eq:IBP}
  \end{equation}
  in the sense of distributions.
\end{theorem}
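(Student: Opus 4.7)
The strategy is simply to pass to the limit in the discrete integration by parts identity \eqref{eq:ibp1}, treating each of the three terms separately along a subsequence $(\varepsilon_n, M_n) \to (0, \infty)$ for which $\nu_{M_n,\varepsilon_n} \circ (\mathcal{E}^{\varepsilon_n})^{-1} \to \nu$ and for which the joint tightness statement of Corollary~\ref{cor:t} yields an accumulation point $\mu$ of the laws of $(\mathcal{E}^{\varepsilon} \varphi_{M,\varepsilon}, \mathcal{E}^{\varepsilon} X_{M,\varepsilon})$ whose first marginal is $\nu$. It is convenient to test \eqref{eq:ibp1} against a Schwartz function $f \in \mathcal{S}(\mathbb{R}^3)$, extend all lattice-defined quantities by $\mathcal{E}^{\varepsilon}$, and then argue convergence of each term as distributions in $x$.

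For the left-hand side, the identity $\langle \mathcal{E}^{\varepsilon}[w_\varepsilon * \mathD F(\mathcal{E}^\varepsilon \varphi)], f\rangle = \langle \mathD F(\mathcal{E}^\varepsilon \varphi), \mathcal{E}^{\varepsilon,*} f\rangle_\varepsilon$ together with the smoothness of $F$, the convergence $\mathcal{E}^\varepsilon \varphi_{M,\varepsilon} \to \varphi$ in law, and the growth bound on $\mathD F$ from the hypotheses (combined with the uniform moment estimate of Theorem~\ref{thm:main}) permits passing to the limit via the Skorokhod representation theorem and dominated convergence. The linear term on the right-hand side is analogous: $\mathcal{E}^\varepsilon[(m^2 - \Delta_\varepsilon)\varphi_{M,\varepsilon}] = (m^2 - \Delta)\mathcal{E}^\varepsilon \varphi_{M,\varepsilon} + o(1)$ in a suitable negative regularity space (by standard lattice Fourier estimates), and the same uniform integrability furnished by Theorem~\ref{thm:main} allows the passage to the limit, producing the distributional object $2\int [(m^2-\Delta)\varphi] F(\varphi)\, \nu(\mathrm{d}\varphi)$ on the right-hand side.

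The only delicate point, as expected, is the cubic term, whose limit is precisely $\mathcal{J}_\nu(F)$. Here I would invoke Lemma~\ref{lemma:IF}, which states exactly what is needed: $\mathcal{E}^\varepsilon \mathcal{J}_{M,\varepsilon}(F) \to \mathcal{J}_\nu(F)$ in $\mathcal{S}'(\mathbb{R}^3)$ along the chosen subsequence, with the uniform bounds in $B^{-1-3\kappa}_{1,1}(\rho^{4+\sigma}) + \CC^{-3/2-\kappa}(\rho^\sigma)$ ensuring that the limiting distribution $\mathcal{J}_\nu(F)$ is well defined and given by the probabilistic representation of Lemma~\ref{lemma:IF}. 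The main conceptual obstacle is therefore not in this final proof but in the preceding machinery: the uniform time regularity from Theorem~\ref{thm:phitight} and the paracontrolled decomposition of $\llbracket \varphi^3_{M,\varepsilon}\rrbracket$ developed in Section~\ref{s:reg} are what make the limit of the cubic term exist and identifiable, notwithstanding that $\llbracket\varphi^3\rrbracket$ is not a pointwise-in-time random variable under $\nu$.

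Assembling the three limits and reading the identity as a statement between tempered distributions on $\mathbb{R}^3$ produces \eqref{eq:IBP}. The independence of the right-hand side of the choice of the mollifier $h$ used to define $\mathcal{J}_\nu(F)$ follows from the characterization via operator product expansion given in Lemma~\ref{lem:OPE}, which expresses $\mathcal{J}_\nu(F)$ purely in terms of $\nu$ and not of the dynamics; thus \eqref{eq:IBP} is truly a statement about the static measure $\nu$.
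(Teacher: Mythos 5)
Your proposal is correct and follows essentially the same route as the paper: the paper also treats Theorem~\ref{thm:ibp} as a direct consequence of passing to the limit in \eqref{eq:ibp1}, where the left-hand side and the linear term are handled by Theorem~\ref{thm:main}, the cubic term is handled by the work culminating in Lemma~\ref{lemma:IF}, and the remark following Lemma~\ref{lem:OPE} justifies writing $\mathcal{J}_\nu$ in place of $\mathcal{J}_\mu$. Your added comments about the Skorokhod representation, the lattice-to-continuum error in the Laplacian term, and the $h$-independence are consistent with (and make explicit) what the paper leaves implicit.
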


When interpreted in terms of $n$-point correlation functions, the integration
by parts formula {\eqref{eq:IBP}} gives rise to the hierarchy of
Dyson--Schwinger equations for any limiting measure $\nu$.

\begin{corollary}
  \label{cor:SD}Let $n \in \mathbb{N}$. Any accumulation point $\nu$ of the
  sequence $(\nu_{M, \varepsilon} \circ (\mathcal{E}^{\varepsilon})^{- 1})_{M,
  \varepsilon}$ satisfies
  \[ \sum_{i = 1}^n \delta (x - x_i) \mathbb{E}_{\nu} [\varphi (x_1) \cdots
     \varphi (x_{i - 1}) \varphi (x_{i + 1}) \cdots \varphi (x_n)]
     =\mathbb{E}_{\nu} [[(m^2 - \Delta_x) \varphi (x)] \varphi (x_1) \cdots
     \varphi (x_n)] \]
  \[ - \lambda \lim_{N \rightarrow \infty} \mathbb{E}_{\nu} [\varphi (x_1) \cdots
     \varphi (x_n) ((\Delta_{\leqslant N} \varphi (x))^3 - c_N
     \Delta_{\leqslant N} \varphi (x))]_{} \]
  as an equality for distributions in $\mathcal{S}' (\mathbb{R}^3)^{\otimes (n
  + 1)}$.
\end{corollary}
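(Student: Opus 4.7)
My plan is to obtain the Dyson--Schwinger hierarchy by applying the integration by parts formula of Theorem~\ref{thm:ibp} to the cylinder polynomials $F_n(\varphi) = \prod_{i=1}^n \varphi(f_i)$ with $f_1,\ldots,f_n \in \mathcal{S}(\mathbb{R}^3)$, and then reading the resulting identity distributionally in the variables $x_1,\ldots,x_n$. The main preparatory step is to verify that $F_n$ belongs to the class of test functionals covered by Lemma~\ref{lemma:IF} and Theorem~\ref{thm:ibp}. The $L^2$-gradient is given explicitly by
\[
\mathD F_n(\varphi)(x) \;=\; \sum_{i=1}^n f_i(x) \prod_{j \neq i} \varphi(f_j),
\]
so both $F_n$ and $\mathD F_n$ are polynomials of degree at most $n$ in $\varphi$. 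Since $f_1,\ldots,f_n \in \mathcal{S}(\mathbb{R}^3) \subset H^{1/2+2\kappa}(\rho^{-2}) \cap B^{1+3\kappa}_{\infty,\infty}(\rho^{-4-\sigma})$, the duality bound $|\varphi(f_i)| \lesssim \|f_i\|_{H^{1/2+2\kappa}(\rho^{-2})} \|\varphi\|_{H^{-1/2-2\kappa}(\rho^2)}$ gives the required estimate $|F_n(\varphi)| + \|\mathD F_n(\varphi)\|_{B^{1+3\kappa}_{\infty,\infty}(\rho^{-4-\sigma})} \leqslant C_{F_n}(1 + \|\varphi\|_{H^{-1/2-2\kappa}(\rho^2)}^n)$, and the constant term can be absorbed by enlarging $n$.

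Next, I would apply the integration by parts identity \eqref{eq:IBP} to $F_n$, tested against an additional test function $f_0 \in \mathcal{S}(\mathbb{R}^3)$ in the free variable $x$. The left-hand side becomes
\[
\Big\langle f_0, \int \mathD F_n(\varphi)\, \nu(\mathd\varphi) \Big\rangle \;=\; \sum_{i=1}^n \langle f_0, f_i\rangle \int \prod_{j\neq i}\varphi(f_j)\, \nu(\mathd\varphi),
\]
which, when read against $f_0\otimes f_1 \otimes \cdots \otimes f_n$, corresponds exactly to the distribution $\sum_{i=1}^n \delta(x-x_i)\,\mathbb{E}_\nu[\prod_{j\neq i}\varphi(x_j)]$ appearing on the left-hand side of the corollary. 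The first term on the right-hand side of \eqref{eq:IBP} gives immediately $\mathbb{E}_\nu[[(m^2-\Delta_x)\varphi(x)]\varphi(x_1)\cdots\varphi(x_n)]$ by duality. The third contribution is $2\lambda \langle \mathcal{J}_\nu(F_n), f_0\rangle$, and here I would invoke the operator product expansion representation of Lemma~\ref{lem:OPE}, which yields
\[
\mathcal{J}_\nu(F_n) \;=\; \lim_{N\to\infty} \int F_n(\varphi)\,\big[(\Delta_{\leqslant N}\varphi)^3 - c_N(\Delta_{\leqslant N}\varphi)\big]\,\nu(\mathd\varphi)
\]
in $\mathcal{S}'(\mathbb{R}^3)$. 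Reinterpreting this last expression as an integral against $f_0(x)f_1(x_1)\cdots f_n(x_n)$ matches the renormalized cubic term displayed in the corollary.

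Since $f_0, f_1,\ldots,f_n$ are arbitrary Schwartz functions, the resulting identity is precisely the claimed equality of distributions on $\mathcal{S}'(\mathbb{R}^3)^{\otimes(n+1)}$ after dividing by $2$. There is no substantial new obstacle in this argument: all the hard analytic work (tightness, uniform time regularity of $\varphi_{M,\varepsilon}$, construction and control of $\llbracket\varphi^3\rrbracket$, and the OPE characterization of $\mathcal{J}_\nu$) has been done in Sections~\ref{s:reg}--\ref{s:sd}. The only slightly delicate point is checking the growth hypothesis of Lemma~\ref{lemma:IF} for polynomial $F_n$, which as noted above reduces to the Schwartz regularity of the test functions and the duality between $H^{1/2+2\kappa}(\rho^{-2})$ and $H^{-1/2-2\kappa}(\rho^2)$.
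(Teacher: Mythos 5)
Your approach is exactly the intended one: the paper gives no separate proof of Corollary~\ref{cor:SD} and presents it as an immediate reading of the integration by parts formula~\eqref{eq:IBP} against $F_n(\varphi)=\prod_{i=1}^n\varphi(f_i)$, combined with the OPE characterization of $\mathcal{J}_\nu$ from Lemma~\ref{lem:OPE}. Your verification that $F_n$ and its $L^2$-gradient satisfy the growth hypothesis of Lemma~\ref{lemma:IF} is the right preparatory check, and the identification of $\langle f_0,f_i\rangle$ with the contact term $\delta(x-x_i)$ is correct.

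The one place you should be more careful is your closing claim that the identity becomes ``precisely the claimed equality \ldots after dividing by~$2$.'' Substituting $F_n$ into~\eqref{eq:IBP} gives
\[
\sum_{i=1}^n \delta(x-x_i)\,\mathbb{E}_\nu\Big[\prod_{j\neq i}\varphi(x_j)\Big]
\;=\;2\,\mathbb{E}_\nu\big[\,[(m^2-\Delta_x)\varphi(x)]\varphi(x_1)\cdots\varphi(x_n)\big]
\;+\;2\lambda\,\lim_{N\to\infty}\mathbb{E}_\nu\big[\varphi(x_1)\cdots\varphi(x_n)\big((\Delta_{\leqslant N}\varphi(x))^3-c_N\Delta_{\leqslant N}\varphi(x)\big)\big],
\]
where the factor of~$2$ traces back to the normalization $e^{-2V}$ in~\eqref{eq:gibbs}. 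Dividing through by~$2$ still leaves a coefficient $+\lambda$ in front of the renormalized cubic term, whereas Corollary~\ref{cor:SD} as printed in the paper shows $-\lambda$ and no factor of~$2$ on the right. Your derivation is the correct one; the discrepancy is a sign and factor inconsistency in the displayed statement of the corollary (which does not propagate from Theorem~\ref{thm:ibp} as written), and you should flag it rather than absorb it into an ``after dividing by~$2$'' remark that does not actually restore the minus sign.
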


In particular, this allow to express the (space-homogeneous) two-point
function $S_{2}^{\nu} (x - y) \assign \mathbb{E}_{\nu} [\varphi (x) \varphi
(y)]$ of $\nu$ as the solution to
\[ \delta (x - y) = (m^2 - \Delta_x) S^{\nu}_{2} (x - y) -\lambda \lim_{N \rightarrow
   \infty} [((\mathbb{I} \otimes \Delta_{\leqslant N}^{\otimes 3}) S^{\nu}_{4})
   (y, x, x, x) - c_N (\Delta_{\leqslant N} S^{\nu}_{2}) (x - y)], \]
where the right hand side includes the  four point function $S^{\nu}_{4}
(x_1, \ldots, x_4) \assign \mathbb{E}_{\nu} [\varphi (x_1) \cdots \varphi
(x_4)].$

Finally, we observe that the above arguments also allow us to pass to the
limit in the stochastic quantization equation and to identify the continuum
dynamics. To be more precise, we use Skorokhod's representation theorem to
obtain a new probability space together with (not relabeled) processes
$(\varphi_{M, \varepsilon}, \mathbb{X}_{M, \varepsilon})$ defined on some
probability space and converging in the appropriate topology determined above
to some $(\varphi, \mathbb{X})$. We deduce the following result.

\begin{corollary}
The couple $(\varphi, \mathbb{X})$ solves the continuum stochastic quantization
equation
\[ \LL \varphi +\lambda \llbracket \varphi^3 \rrbracket = \xi
   \qquad \tmop{in} \qquad \mathcal{S}' (\mathbb{R}_+ \times \mathbb{R}^d), \]
where $\xi = \LL X$ and $\llbracket \varphi^3 \rrbracket$ is given by
{\eqref{eq:phi3}}.
\end{corollary}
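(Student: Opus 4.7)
The starting point is the approximate stochastic quantization equation \eqref{eq:moll}, which, after factoring the renormalization constants in the mass-type counterterm, can be rewritten as
\[
\LL_{\varepsilon} \varphi_{M,\varepsilon} + \lambda \llbracket \varphi_{M,\varepsilon}^3 \rrbracket = \xi_{M,\varepsilon}, \qquad \text{on } \mathbb{R}_+ \times \Lambda_{M,\varepsilon},
\]
where $\llbracket \varphi_{M,\varepsilon}^3 \rrbracket = \varphi_{M,\varepsilon}^3 + (-3 a_{M,\varepsilon} + 3\lambda b_{M,\varepsilon})\varphi_{M,\varepsilon}$ is the renormalized cube introduced before Lemma~\ref{lemma:IF}. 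Applying the extension operator $\mathcal{E}^{\varepsilon}$ of Section~\ref{s:ext}, which commutes with convolution against smooth test functions up to an error vanishing in the limit, yields on $\mathbb{R}_+\times\mathbb{R}^3$
\[
\LL \mathcal{E}^{\varepsilon}\varphi_{M,\varepsilon} + \lambda\, \mathcal{E}^{\varepsilon}\llbracket \varphi_{M,\varepsilon}^3 \rrbracket = \mathcal{E}^{\varepsilon}\xi_{M,\varepsilon} + R_{M,\varepsilon},
\]
where $R_{M,\varepsilon}\to 0$ in $\mathcal{S}'(\mathbb{R}_+\times\mathbb{R}^3)$ as a consequence of the commutator estimates for $\mathcal{E}^{\varepsilon}$ acting on the discrete Laplacian (the bounds of Lemma~\ref{lem:ext} and the approximation properties of $w_\varepsilon$). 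Since $\xi_{M,\varepsilon} = \LL_{\varepsilon} X_{M,\varepsilon}$, the right-hand side equals $\LL \mathcal{E}^{\varepsilon} X_{M,\varepsilon}$ up to a further vanishing error.

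First I would invoke Skorokhod's representation, as indicated in the paragraph before the final corollary, to obtain a new probability space on which the joint laws converge almost surely in the topology of Corollary~\ref{cor:t}, extended to accommodate the full stochastic data $\mathbb{X}_{M,\varepsilon}$. In this setting the linear term is trivial: $\LL \mathcal{E}^{\varepsilon}\varphi_{M,\varepsilon}\to \LL \varphi$ and $\LL \mathcal{E}^{\varepsilon} X_{M,\varepsilon}\to \LL X = \xi$ in $\mathcal{S}'(\mathbb{R}_+\times\mathbb{R}^3)$. It therefore only remains to show that
\[
\mathcal{E}^{\varepsilon}\llbracket \varphi_{M,\varepsilon}^3 \rrbracket \longrightarrow \llbracket \varphi^3\rrbracket \qquad \text{in } \mathcal{S}'(\mathbb{R}_+\times\mathbb{R}^3),
\]
with $\llbracket \varphi^3\rrbracket$ given by formula \eqref{eq:phi3}.

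The key step is the algebraic decomposition of $\llbracket \varphi_{M,\varepsilon}^3 \rrbracket$ already carried out in the construction of $\mathcal{J}_\mu(F)$ in the previous subsection. Plugging $\varphi_{M,\varepsilon} = X_{M,\varepsilon} - \lambda \tthreeone{X}_{M,\varepsilon} + \zeta_{M,\varepsilon}$ and using the paracontrolled ansatz $\phi_{M,\varepsilon} = -3\lambda \ttwoone{X}_{M,\varepsilon}\succ \phi_{M,\varepsilon} + \chi_{M,\varepsilon}$ together with the commutator Lemma~\ref{lem:comm1} produces a representation of $\llbracket \varphi_{M,\varepsilon}^3 \rrbracket$ as a finite sum whose terms are of two types: (i) polynomial expressions in the stochastic data $\mathbb{X}_{M,\varepsilon}$ (such as $\llbracket X^3_{M,\varepsilon}\rrbracket$, $Z_{M,\varepsilon}$, $\tthreetwor{X}_{M,\varepsilon}$ etc.) which converge to their continuum counterparts in $C_T\CC^{\alpha,\varepsilon}(\rho^{\sigma})$-type spaces, and (ii) mixed products involving $\phi_{M,\varepsilon}$, $\zeta_{M,\varepsilon}$, $\chi_{M,\varepsilon}$ paired via $\prec$, $\succ$, $\circ$ with the noise. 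The uniform bounds of Theorem~\ref{thm:phitight} and Proposition~\ref{prop:reg} guarantee that each such term is controlled, uniformly in $(M,\varepsilon)$, in $L^1_T B^{\alpha,\varepsilon}_{1,1}(\rho^{4+\sigma})$ for appropriate $\alpha$, while the continuity of the paraproducts and resonant products in these spaces (uniform in $\varepsilon$, by Section~\ref{ssec:stoch} and the results of \cite{MP17}) gives convergence of each piece to the corresponding term in \eqref{eq:phi3}.

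The main obstacle, as in the treatment of $\mathcal{J}_\nu$, is that $\llbracket X^3\rrbracket$ has negative time regularity and does not make sense as a random variable at a fixed time; however, tested against a smooth function in $t$ (which is what distribution convergence on $\mathbb{R}_+\times\mathbb{R}^3$ requires) the uniform bound
\[
\sup_{M,\varepsilon}\mathbb{E}\,\|\llbracket X^3_{M,\varepsilon}\rrbracket\|^{2p}_{W^{-\kappa,\infty}_T\CC^{-3/2-\kappa,\varepsilon}(\rho^{\sigma})} <\infty
\]
together with the almost-sure convergence on the Skorokhod space is enough to pass to the limit. Combining all convergences yields
\[
\LL\varphi + \lambda \llbracket\varphi^3\rrbracket = \xi \qquad \text{in } \mathcal{S}'(\mathbb{R}_+\times\mathbb{R}^3),
\]
which is the desired identification of the continuum dynamics.
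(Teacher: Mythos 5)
Your proposal is correct and follows essentially the same approach as the paper. The paper itself gives no separate proof for this corollary: it is deduced in the single preceding sentence, which invokes Skorokhod's representation theorem together with the convergence topology established in Corollary~\ref{cor:t} and the decomposition of $\llbracket\varphi^3_{M,\varepsilon}\rrbracket$ built up in the preceding pages. Your write-up fleshes out exactly that sketch: almost-sure convergence on a Skorokhod space, trivial passage in the linear term, term-by-term convergence of the renormalized cube via the uniform bounds of Theorem~\ref{thm:phitight} and Proposition~\ref{prop:reg}, and handling of $\llbracket X^3\rrbracket$ through its $W^{-\kappa,\infty}_T\CC^{-3/2-\kappa}$ norm. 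Your acknowledgement that replacing $\mathcal{E}^{\varepsilon}\LL_\varepsilon$ by $\LL\mathcal{E}^\varepsilon$ produces an error $R_{M,\varepsilon}$ that must be shown to vanish is the right caveat; the paper leaves this standard lattice-approximation estimate implicit as well.
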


\section{Fractional $\Phi^4_3$}
\label{sec:fractional}

In this section we discuss the extension of the results of this  paper to the \emph{fractional $\Phi^4_3$ model}, namely to the limit  of the following discrete Gibbs measures. Let $\gamma \in (0, 1)$ and set
{\small{\begin{equation}
  \mathd \nu_{M, \varepsilon}^{\gamma} \propto \exp \left\{ - 2 \varepsilon^d 
  \sum_{x \in \Lambda_{M, \varepsilon}} \left[ \frac{\lambda}{4} | \varphi_x
  |^4 + \frac{- 3 \lambda a_{M, \varepsilon} + 3 \lambda^2 b_{M, \varepsilon}
  + m^2}{2} | \varphi_x |^2 + \frac{1}{2} | (- \Delta_{\varepsilon})^{\gamma /
  2} \varphi_x |^2 \right] \right\}  \prod_{x \in \Lambda_{M, \varepsilon}}
  \hspace{-0.17em} \hspace{-0.17em} \mathd \varphi_x,
  \label{eq:fractional-gibbs}
\end{equation}}}
where $(- \Delta_{\varepsilon})^{\gamma}$ is the (discrete) fractional
Laplacian operator given through Fourier transform by
\[ \mathcal{F} ((- \Delta_{\varepsilon})^{\gamma} f) (k) = l_{\varepsilon}
   (k)^{\gamma}  \hat{f} (k), \]
with $l_{\varepsilon} (k) := \sum_{j = 1}^3 4 \sin^2 (\varepsilon \pi k_j) /
\varepsilon^2$. The kernel of the operator $(- \Delta_{\varepsilon})^{\gamma}$ on the lattice $(\varepsilon \mathbb{Z})^3$ has power-law decay in  space and therefore the above measure corresponds to a non-Gaussian unbounded-spin system with long-range interactions. Varying $\gamma$ at fixed space dimension allows to explore a range of super-renormalizable models which approach the critical dimension  as $\gamma$ is lowered. These and similar models have been considered in~\cite{brydges_non_gaussian_1998, MR2004988, MR2350436, slade_critical_2018, MR3874867} as rigorous ways to implement Wilson's and Fisher's $\varepsilon$-expansion idea, namely the study of critical models perturbatively in the distance to the critical dimension.

Let us first observe  that the measure $\nu_{M, \varepsilon}^{\gamma}$ is reflection positive. Albeit this result
seems to belong to the folklore of the mathematical physics community, we could
not find a clear reference to this fact and therefore we will give a sketch  of the proof. We
start from the observation that the fractional Laplacian generates a
reflection positive Gaussian measure. The proof we report below is due to
A.~Abdesselam (private communication). Recall that on $\Lambda_{M,\varepsilon}$ we define reflections $\theta^i$ with  $i=1,2,3$ and the reflection positivity as in Section~\ref{ss:OS2}. 
Below, the reflection positivity is always understood with respect to $\theta=\theta^1$. Of course,  similar considerations hold for the other directions as well.

\begin{theorem}
  Let $a > 0$, $\gamma \in
  (0, 1)$ and let  $\mu_{M, \varepsilon}^{\gamma}$  be the Gaussian measure on  $\Lambda_{M, \varepsilon}$ with
  covariance given by $(a - \Delta_{\varepsilon})^{- \gamma}$. Then $\mu_{M, \varepsilon}^{\gamma}$  is reflection positive.
\end{theorem}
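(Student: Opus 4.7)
The plan is to deduce reflection positivity of the fractional Gaussian measure from the classical reflection positivity of the massive lattice free field via a subordination (Balakrishnan) representation of the fractional resolvent. Throughout, I fix the reflection $\theta=\theta^1$ across the hyperplane $\{x_1=0\}$ (shifted by $\varepsilon/2$ if needed so that no lattice site lies on it), let $\Lambda_{M,\varepsilon}^+$ denote the corresponding positive half, and recall the standard fact that a centered Gaussian measure on $\Lambda_{M,\varepsilon}$ with covariance $C$ is reflection positive if and only if the sesquilinear form
\[
  (f,g)\ \longmapsto\ \langle \overline{\theta f},\, C g\rangle_{\varepsilon}
\]
is positive semi-definite on complex-valued functions $f,g:\Lambda_{M,\varepsilon}\to\mathbb{C}$ supported in $\Lambda_{M,\varepsilon}^+$. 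This equivalence follows because the characteristic functional $F_{f}(\varphi)=e^{i\varphi(f)}$ is dense (in $L^2$) in $\mathcal{H}_+^{M,\varepsilon}$ and
$\int \overline{\theta F_f}\,F_g\,\mathrm{d}\mu^\gamma_{M,\varepsilon}$ reduces to a Gaussian integral involving $\langle \overline{\theta f},Cg\rangle_\varepsilon$.

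First I would recall the Balakrishnan formula: for any $\gamma\in(0,1)$ and any $\lambda>0$,
\[
  \lambda^{-\gamma}=\frac{\sin(\pi\gamma)}{\pi}\int_0^{\infty} s^{-\gamma}(s+\lambda)^{-1}\,\mathrm{d}s.
\]
Since $a-\Delta_\varepsilon$ is a self-adjoint, strictly positive operator on $\ell^2(\Lambda_{M,\varepsilon})$ (as $a>0$), functional calculus yields
\[
  (a-\Delta_\varepsilon)^{-\gamma}=\frac{\sin(\pi\gamma)}{\pi}\int_0^{\infty} s^{-\gamma}\,(s+a-\Delta_\varepsilon)^{-1}\,\mathrm{d}s,
\]
where the integral converges in operator norm (uniformly on the finite-dimensional space $\ell^2(\Lambda_{M,\varepsilon})$, so all integrability issues are trivial).

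The second step, which I expect to be the main obstacle, is to establish reflection positivity of the covariances $(s+a-\Delta_\varepsilon)^{-1}$ uniformly in $s\geqslant 0$. This is the classical reflection positivity of the massive lattice Gaussian free field, but some care is needed for a fully self-contained argument on $\Lambda_{M,\varepsilon}$. I would proceed as in the standard treatment (see e.g.\ Glimm--Jaffe): decompose the quadratic form $\langle\varphi,(a-\Delta_\varepsilon+s)\varphi\rangle_\varepsilon$ as the sum of a part living purely on the positive half, a symmetric part on the negative half, and a nearest-neighbour coupling term across the reflection hyperplane of the form $-\varepsilon^{d-2}\sum_{x'} \varphi(x_+)\varphi(\theta x_+)$ (with $x_+$ running over the lattice sites adjacent to the hyperplane on the positive side). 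The first two summands generate, respectively, an independent Gaussian on each half with identical laws under $\theta$, while the coupling term, being of the form $\langle f,\theta f\rangle_\varepsilon$ with a single $f$, gives rise after expanding the Gaussian integral to a manifestly positive semi-definite contribution. Summing over terms (via the Trotter-like expansion of the exponential of the coupling) yields
\[
  \langle \overline{\theta f},(s+a-\Delta_\varepsilon)^{-1} f\rangle_{\varepsilon}\ \geqslant\ 0\qquad\text{for every }f\text{ supported in }\Lambda_{M,\varepsilon}^+.
\]
This is precisely the Gaussian RP criterion for the massive free field with mass parameter $s+a$, and it holds uniformly in $s\geqslant 0$.

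Finally, since the map $C\mapsto\langle\overline{\theta f},Cf\rangle_\varepsilon$ is linear and continuous in $C$, positivity is preserved by positive integrals against the measure $\tfrac{\sin(\pi\gamma)}{\pi}s^{-\gamma}\mathrm{d}s$, so
\[
  \langle\overline{\theta f},(a-\Delta_\varepsilon)^{-\gamma}f\rangle_{\varepsilon}=\frac{\sin(\pi\gamma)}{\pi}\int_0^{\infty}s^{-\gamma}\langle\overline{\theta f},(s+a-\Delta_\varepsilon)^{-1}f\rangle_{\varepsilon}\,\mathrm{d}s\ \geqslant\ 0
\]
for every $f$ supported in $\Lambda_{M,\varepsilon}^+$. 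By the Gaussian RP criterion recalled above, this proves that $\mu^\gamma_{M,\varepsilon}$ is reflection positive.
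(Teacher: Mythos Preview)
Your proof is correct and follows essentially the same route as the paper: both use the integral representation $(a-\Delta_\varepsilon)^{-\gamma}=c_\gamma\int_0^\infty s^{-\gamma}(s+a-\Delta_\varepsilon)^{-1}\,\mathrm{d}s$ (your Balakrishnan formula is the paper's $K_\gamma$ identity, with $c_\gamma=\sin(\pi\gamma)/\pi=1/K_\gamma(1)$), invoke the classical reflection positivity of the nearest-neighbour massive free field for each resolvent, and integrate. The only cosmetic difference is that the paper makes the passage from positivity of the covariance form to reflection positivity of the Gaussian measure explicit via the Schur--Hadamard product theorem, whereas you cite this as the standard Gaussian RP criterion.
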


\begin{proof}
  Let $\rho > 0$ and let $K_{\gamma} (\rho) \assign \int_0^{\infty}
  \frac{\mathd t}{t^{\gamma} (t + \rho)}$, so that $K_{\gamma} (\rho) =
  \rho^{- \gamma} K_{\gamma} (1)$. As a consequence we have the formula (as
  Fourier multipliers)
  \[ (a - \Delta_{\varepsilon})^{- \gamma} = \frac{1}{K_{\gamma} (1)}
     \int_0^{\infty} (t + a - \Delta_{\varepsilon})^{- 1} \frac{\mathd
     t}{t^{\gamma}} . \]
  Now the Gaussian measure with covariance $(t + a - \Delta_{\varepsilon})^{-
  1}$ corresponds to a spin-spin nearest neighbors interaction and is well
  known to be reflection positive (see the discussion in Section~\ref{ss:OS2}). In
  particular,
  \[ \sum_{x, y \in \Lambda_{M, \varepsilon}} \overline{\theta f (x)} f (y) (t
     + a - \Delta_{\varepsilon})^{- 1} (x, y) \geqslant 0, \]
  for all $f : \Lambda_{M, \varepsilon} \rightarrow \mathbb{C}$ supported on
  $\Lambda_{M, \varepsilon}^+ = \{ x \in \Lambda_{M, \varepsilon} : 0 < x_1
   < M / 2 \}$. Taking the appropriate integral over $t$ we get
  \[ \sum_{x, y \in \Lambda_{M, \varepsilon}} \overline{\theta f (x)} f (y) (a
     - \Delta_{\varepsilon})^{- \gamma} (x, y) \geqslant 0. \]
  From this we can deduce that, for all cylinder functions $F$ supported on
  $\Lambda_{M, \varepsilon}^+$ we have
  \[ \mathbb{E} [\overline{\theta F (\phi)} F (\phi)] \geqslant 0, \]
  where $\phi$ is the Gaussian field with covariance $(a -
  \Delta_{\varepsilon})^{- \gamma}$. This follows from taking $F$ as a linear
  combination of exponentials and then using Schur-Hadamard product theorem to
  deduce positivity and finally concluding by a density argument (see e.g. \cite[Thm 6.2.2]{MR887102}).
\end{proof}

\begin{corollary}
  The fractional $\Phi^4_3$ measure~{\eqref{eq:fractional-gibbs}} on
  $\Lambda_{M, \varepsilon}$ is reflection positive.
\end{corollary}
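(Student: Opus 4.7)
The plan is to realize $\nu_{M,\varepsilon}^\gamma$ as a weak limit of a family of reflection positive measures, each obtained by tilting the reflection positive Gaussian of the previous theorem by a positive single-site local density. Concretely, for each $a>0$ I would introduce the auxiliary measure
\begin{equation*}
d\nu_{M,\varepsilon}^{\gamma,a}(\varphi) \propto e^{-V(\varphi)}\,d\mu_{M,\varepsilon}^{\gamma,a}(\varphi),
\end{equation*}
where $\mu_{M,\varepsilon}^{\gamma,a}$ is the Gaussian with covariance $(a-\Delta_\varepsilon)^{-\gamma}$ (reflection positive by the previous theorem) and $V(\varphi) = 2\varepsilon^d\sum_x\bigl[\frac{\lambda}{4}|\varphi_x|^4 + \frac{c_{M,\varepsilon}}{2}|\varphi_x|^2\bigr]$, with $c_{M,\varepsilon} = m^2 - 3\lambda a_{M,\varepsilon} + 3\lambda^2 b_{M,\varepsilon}$, collects the quartic, mass and counter-terms.

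Since every single-site term $V_x(\varphi_x)$ depends only on $|\varphi_x|$, the potential splits symmetrically as $V = V_+ + V_0 + V_-$, with $V_\pm = \sum_{\pm x_1>0}V_x$ and $V_0 = \sum_{x_1 = 0}V_x$, satisfying $V_- = \theta V_+$ and $\theta V_0 = V_0$. For $F\in\mathcal{H}_+^{M,\varepsilon}$ this gives
\begin{equation*}
\mathbb{E}_{\nu_{M,\varepsilon}^{\gamma,a}}[\overline{\theta F}\cdot F] \propto \mathbb{E}_{\mu_{M,\varepsilon}^{\gamma,a}}\bigl[\theta(\overline{Fe^{-V_+}})\cdot (Fe^{-V_+})\cdot e^{-V_0}\bigr],
\end{equation*}
and I would then invoke the standard approximation argument of the constructive QFT literature (in the spirit of~\cite[Thm.~7.10.3]{MR887102} or~\cite[Lem.~10.8]{friedli2017statistical}): $Fe^{-V_+}$ is a positive function of the fields on $\{x_1>0\}$ and can be approximated by elements of $\mathcal{H}_+^{M,\varepsilon}$, while $e^{-V_0}\geqslant 0$ is a boundary functional. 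Combined with the reflection positivity of $\mu_{M,\varepsilon}^{\gamma,a}$ this yields nonnegativity of the right-hand side, and hence reflection positivity of $\nu_{M,\varepsilon}^{\gamma,a}$ for every $a>0$.

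Finally I would pass to the limit $a\to 0^+$: the kinetic quadratic form $\frac{1}{2}\langle\varphi,(a-\Delta_\varepsilon)^\gamma\varphi\rangle$ converges to $\frac{1}{2}\langle\varphi,(-\Delta_\varepsilon)^\gamma\varphi\rangle$, and the coercivity provided by the quartic term in $V$ gives the uniform integrability needed to deduce weak convergence $\nu_{M,\varepsilon}^{\gamma,a}\to\nu_{M,\varepsilon}^\gamma$; since the defining reflection positivity inequality~\eqref{eq:OS2} only involves bounded continuous cylinder functionals, it passes to weak limits, so $\nu_{M,\varepsilon}^\gamma$ is reflection positive.

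The main technical obstacle is the single-site tilting argument for the Gaussian $\mu_{M,\varepsilon}^{\gamma,a}$: unlike the nearest-neighbor case the fractional Laplacian does not enjoy a Markov property, so one cannot factorize the Gaussian expectation by conditioning on the boundary fields. Nevertheless reflection positivity is preserved because $V$ is strictly local, $\theta$ acts trivially on the algebra generated by the boundary fields, and the positive multiplier $e^{-V_0}$ can be absorbed into the quadratic form of the test functionals via an approximation by imaginary exponentials — this is precisely the content of the classical reflection positivity preservation lemmas.
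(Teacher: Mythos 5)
Your proposal follows essentially the same route as the paper: tilt the reflection positive Gaussian $\mu_{M,\varepsilon}^{\gamma,a}$ with covariance $(a-\Delta_\varepsilon)^{-\gamma}$ by the strictly local potential density, use the symmetric factorization $\rho_\Lambda = \rho_{\Lambda^+}\cdot\theta\rho_{\Lambda^+}$ to reduce the RP inequality for $\nu_{M,\varepsilon}^{\gamma,a}$ to that of $\mu_{M,\varepsilon}^{\gamma,a}$, and pass to the weak limit $a\to 0^+$. The only difference is that you are more explicit about the boundary slice $V_0$ at $\{x_1=0\}$, which the paper absorbs implicitly into the factorization; your argument is correct and slightly more careful on that point.
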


\begin{proof}
  Take $a > 0$ and consider the measure
  \[ \nu_{M, \varepsilon}^{\gamma, a} (\mathd \phi) = \frac{1}{Z^{\gamma,
     a}_{M, \varepsilon}} \rho_{\Lambda_{M, \varepsilon}} (\phi) \mu_{M,
     \varepsilon}^{\gamma} (\mathd \phi), \]
  where $\mu_{M, \varepsilon}^{\gamma}$ is, as above, the Gaussian measure with
  covariance $(a - \Delta_{\varepsilon})^{- \gamma}$ and
  \[ \rho_{\Lambda_{M, \varepsilon}} (\varphi) \assign \exp \left\{ - 2
     \varepsilon^d  \sum_{x \in \Lambda_{M, \varepsilon}} \left[
     \frac{\lambda}{4} | \varphi_x |^4 + \frac{- 3 \lambda a_{M, \varepsilon}
     + 3 \lambda^2 b_{M, \varepsilon} + m^2}{2} | \varphi_x |^2 \right]
     \right\} . \]
  Note that $\rho_{\Lambda_{M, \varepsilon}} (\varphi) = \rho_{\Lambda_{M,
  \varepsilon}^+} (\varphi) (\theta \rho_{\Lambda_{M, \varepsilon}^+})
  (\varphi)$ and that we can write
  \[ \int \overline{\theta F (\phi)} F (\phi) \nu_{M, \varepsilon}^{\gamma, a}
     (\mathd \phi) = \frac{1}{Z^{\gamma, a}_{M, \varepsilon}} \int
     \overline{\theta F (\phi)} F (\phi) \rho_{\Lambda_{M, \varepsilon}}
     (\phi) \mu_{M, \varepsilon}^{\gamma, a} (\mathd \phi) \]
  \[ = \frac{1}{Z^{\gamma, a}_{M, \varepsilon}} \int \overline{\theta
     (\rho_{\Lambda_{M, \varepsilon}^+} F) (\phi)} (\rho_{\Lambda_{M,
     \varepsilon}^+} F) (\phi) \mu_{M, \varepsilon}^{\gamma, a} (\mathd \phi)
     \geqslant 0, \]
  since we already proved that $\mu_{M, \varepsilon}^{\gamma, a}$ is
  reflection positive. Now, observe also that as $a \rightarrow 0$ the measures
  $(\nu_{M, \varepsilon}^{\gamma, a})_a$ converge weakly to $\nu_{M,
  \varepsilon}^{\gamma}$ and as a consequence we deduce that $\nu_{M,
  \varepsilon}^{\gamma}$ is reflection positive.
\end{proof}

The equilibrium stochastic dynamics associated to the measure $\nu_{M,
\varepsilon}^{\gamma}$ reads
\begin{equation}
  \LL _{\varepsilon}^{\gamma} \varphi_{M, \varepsilon} + \lambda \varphi_{M,
  \varepsilon}^3 + (- 3 \lambda a_{M, \varepsilon} + 3 \lambda^2 b_{M,
  \varepsilon}) \varphi_{M, \varepsilon} = \xi_{M, \varepsilon}, \qquad x \in
  \Lambda_{M, \varepsilon}, \label{eq:frac-moll}
\end{equation}
where $\LL _{\varepsilon}^{\gamma} = \partial_t + \Q_{\varepsilon}^{\gamma} $ and
$\Q_{\varepsilon}^{\gamma} = m^2 + (- \Delta_{\varepsilon})^{\gamma}$. 
We have to take into
account the different regularization properties of the fractional Laplacian, and the related modified space-time scaling
for the fractional heat equation. This  implies that the stochastic terms are of lower regularity. 
In particular,
 $X_{M,\varepsilon},\llbracket X_{M,\varepsilon}^{2}\rrbracket, \llbracket X_{M,\varepsilon}^{3}\rrbracket$ and $X_{M, \varepsilon}^{\!\resizebox{0.6em}{!}{
\begin{tikzpicture}
\pgfpathmoveto{\pgfqpoint{0cm}{-0.035cm}}
\pgfpathlineto{\pgfqpoint{1.376cm}{-0.035cm}}
\pgfpathlineto{\pgfqpoint{1.376cm}{1.552cm}}
\pgfpathlineto{\pgfqpoint{0cm}{1.552cm}}
\pgfpathclose
\pgfusepath{clip}
\begin{pgfscope}
\begin{pgfscope}
\pgfpathmoveto{\pgfqpoint{0cm}{-0.035cm}}
\pgfpathlineto{\pgfqpoint{1.376cm}{-0.035cm}}
\pgfpathlineto{\pgfqpoint{1.376cm}{1.552cm}}
\pgfpathlineto{\pgfqpoint{0cm}{1.552cm}}
\pgfpathclose
\pgfusepath{clip}
\begin{pgfscope}
\begin{pgfscope}
\pgfsetdash{}{0cm}
\pgfsetlinewidth{0.818mm}
\pgfsetroundcap
\pgfsetroundjoin
\pgfsetmiterlimit{7.0}
\definecolor{eps2pgf_color}{gray}{0}\pgfsetstrokecolor{eps2pgf_color}\pgfsetfillcolor{eps2pgf_color}
\pgfpathmoveto{\pgfqpoint{0.117cm}{1.421cm}}
\pgfpathlineto{\pgfqpoint{0.682cm}{0.671cm}}
\pgfpathlineto{\pgfqpoint{1.246cm}{1.421cm}}
\pgfusepath{stroke}
\end{pgfscope}
\definecolor{eps2pgf_color}{gray}{0}\pgfsetstrokecolor{eps2pgf_color}\pgfsetfillcolor{eps2pgf_color}
\pgfpathmoveto{\pgfqpoint{0.273cm}{1.395cm}}
\pgfpathcurveto{\pgfqpoint{0.273cm}{1.432cm}}{\pgfqpoint{0.259cm}{1.467cm}}{\pgfqpoint{0.233cm}{1.492cm}}
\pgfpathcurveto{\pgfqpoint{0.207cm}{1.518cm}}{\pgfqpoint{0.173cm}{1.532cm}}{\pgfqpoint{0.137cm}{1.532cm}}
\pgfpathcurveto{\pgfqpoint{0.1cm}{1.532cm}}{\pgfqpoint{0.066cm}{1.518cm}}{\pgfqpoint{0.04cm}{1.492cm}}
\pgfpathcurveto{\pgfqpoint{0.014cm}{1.467cm}}{\pgfqpoint{0cm}{1.432cm}}{\pgfqpoint{0cm}{1.395cm}}
\pgfpathcurveto{\pgfqpoint{0cm}{1.359cm}}{\pgfqpoint{0.014cm}{1.324cm}}{\pgfqpoint{0.04cm}{1.299cm}}
\pgfpathcurveto{\pgfqpoint{0.066cm}{1.273cm}}{\pgfqpoint{0.1cm}{1.258cm}}{\pgfqpoint{0.137cm}{1.258cm}}
\pgfpathcurveto{\pgfqpoint{0.173cm}{1.258cm}}{\pgfqpoint{0.207cm}{1.273cm}}{\pgfqpoint{0.233cm}{1.299cm}}
\pgfpathcurveto{\pgfqpoint{0.259cm}{1.324cm}}{\pgfqpoint{0.273cm}{1.359cm}}{\pgfqpoint{0.273cm}{1.395cm}}
\pgfusepath{fill}
\begin{pgfscope}
\pgfsetdash{}{0cm}
\pgfsetlinewidth{0.818mm}
\pgfsetmiterlimit{7.0}
\pgfpathmoveto{\pgfqpoint{0.682cm}{0.671cm}}
\pgfpathlineto{\pgfqpoint{0.679cm}{1.418cm}}
\pgfusepath{stroke}
\end{pgfscope}
\pgfpathmoveto{\pgfqpoint{0.815cm}{1.399cm}}
\pgfpathcurveto{\pgfqpoint{0.815cm}{1.435cm}}{\pgfqpoint{0.801cm}{1.47cm}}{\pgfqpoint{0.775cm}{1.496cm}}
\pgfpathcurveto{\pgfqpoint{0.75cm}{1.521cm}}{\pgfqpoint{0.715cm}{1.536cm}}{\pgfqpoint{0.679cm}{1.536cm}}
\pgfpathcurveto{\pgfqpoint{0.643cm}{1.536cm}}{\pgfqpoint{0.608cm}{1.521cm}}{\pgfqpoint{0.582cm}{1.496cm}}
\pgfpathcurveto{\pgfqpoint{0.557cm}{1.47cm}}{\pgfqpoint{0.542cm}{1.435cm}}{\pgfqpoint{0.542cm}{1.399cm}}
\pgfpathcurveto{\pgfqpoint{0.542cm}{1.363cm}}{\pgfqpoint{0.557cm}{1.328cm}}{\pgfqpoint{0.582cm}{1.302cm}}
\pgfpathcurveto{\pgfqpoint{0.608cm}{1.276cm}}{\pgfqpoint{0.643cm}{1.262cm}}{\pgfqpoint{0.679cm}{1.262cm}}
\pgfpathcurveto{\pgfqpoint{0.715cm}{1.262cm}}{\pgfqpoint{0.75cm}{1.276cm}}{\pgfqpoint{0.775cm}{1.302cm}}
\pgfpathcurveto{\pgfqpoint{0.801cm}{1.328cm}}{\pgfqpoint{0.815cm}{1.363cm}}{\pgfqpoint{0.815cm}{1.399cm}}
\pgfusepath{fill}
\pgfpathmoveto{\pgfqpoint{1.345cm}{1.371cm}}
\pgfpathcurveto{\pgfqpoint{1.345cm}{1.408cm}}{\pgfqpoint{1.331cm}{1.442cm}}{\pgfqpoint{1.305cm}{1.468cm}}
\pgfpathcurveto{\pgfqpoint{1.28cm}{1.494cm}}{\pgfqpoint{1.245cm}{1.508cm}}{\pgfqpoint{1.209cm}{1.508cm}}
\pgfpathcurveto{\pgfqpoint{1.172cm}{1.508cm}}{\pgfqpoint{1.138cm}{1.494cm}}{\pgfqpoint{1.112cm}{1.468cm}}
\pgfpathcurveto{\pgfqpoint{1.087cm}{1.442cm}}{\pgfqpoint{1.072cm}{1.408cm}}{\pgfqpoint{1.072cm}{1.371cm}}
\pgfpathcurveto{\pgfqpoint{1.072cm}{1.335cm}}{\pgfqpoint{1.087cm}{1.3cm}}{\pgfqpoint{1.112cm}{1.274cm}}
\pgfpathcurveto{\pgfqpoint{1.138cm}{1.249cm}}{\pgfqpoint{1.172cm}{1.234cm}}{\pgfqpoint{1.209cm}{1.234cm}}
\pgfpathcurveto{\pgfqpoint{1.245cm}{1.234cm}}{\pgfqpoint{1.28cm}{1.249cm}}{\pgfqpoint{1.305cm}{1.274cm}}
\pgfpathcurveto{\pgfqpoint{1.331cm}{1.3cm}}{\pgfqpoint{1.345cm}{1.335cm}}{\pgfqpoint{1.345cm}{1.371cm}}
\pgfusepath{fill}
\begin{pgfscope}
\pgfsetdash{}{0cm}
\pgfsetlinewidth{0.818mm}
\pgfsetroundcap
\pgfsetmiterlimit{4.0}
\pgfpathmoveto{\pgfqpoint{0.682cm}{0.671cm}}
\pgfpathlineto{\pgfqpoint{0.682cm}{0.042cm}}
\pgfusepath{stroke}
\end{pgfscope}
\end{pgfscope}
\end{pgfscope}
\end{pgfscope}
\end{tikzpicture}}}$ have respectively the spatial regularities  $(2\gamma-3)/2-$,  $(2\gamma-3)-$,  $3(2\gamma-3)/2-$, $(10\gamma-9)/2-$. 
 It is clear that using only the first order paracontrolled expansion developed in this  paper 
it is not possible to cover the full range of $\gamma$ for which the model is still  
subcritical (i.e. super-renormalizable). From eq.~\eqref{eq:frac-moll} one can readily compute that criticality in three-dimensions is reached when $\gamma=3/4$ at  which point the term $\llbracket X_{M,\varepsilon}^{2}\rrbracket$ scales like the fractional Laplacian. 

For large enough values of $\gamma \in (3/4,1)$ the analysis proceeds exactly in the case $\gamma = 1$. 
 Consequently  $Y_{M,\varepsilon}$ will also be of regularity $(10\gamma-9)/2-$ (cf. Lemma~\ref{lem:Y1}).  Since based on \eqref{eq:17}, $\phi_{\varepsilon}$ will have regularity $(4\gamma-3)-$, the various commutators 
      $D_{\rho^4, \varepsilon} (\phi_{M,\varepsilon}, - 3\lambda \llbracket
     X_{M,\varepsilon}^2 \rrbracket, \phi_{M,\varepsilon})$, $\langle \rho^4
     \phi_{M,\varepsilon}, \tilde{C}_{\varepsilon} (\phi_{M,\varepsilon}, 3\lambda
     \llbracket X_{M,\varepsilon}^2 \rrbracket, 3\lambda \llbracket X_{M,\varepsilon}^2
     \rrbracket) \rangle_{\varepsilon}$,
     and
  $ D_{\rho^4, \varepsilon} ( \phi_{M,\varepsilon}, 3\lambda \llbracket
     X_{M,\varepsilon}^2 \rrbracket, (\Q_{\varepsilon}^{\gamma})^{- 1} [3\lambda \llbracket
     X_{M,\varepsilon}^2 \rrbracket \succ \phi_{M,\varepsilon}] )$
 will be under control as  soon as 
 $
 (8\gamma-6)+(2\gamma-3)=10\gamma-9>0
 $
 namely when $\gamma > 9/10$. However, the term $Z_{M,\varepsilon}$ now has the regularity of the tree $X_{M,\varepsilon}^{\!\resizebox{!}{.8em}{
\begin{tikzpicture}
\pgfpathmoveto{\pgfqpoint{0cm}{-0.035cm}}
\pgfpathlineto{\pgfqpoint{1.976cm}{-0.035cm}}
\pgfpathlineto{\pgfqpoint{1.976cm}{1.94cm}}
\pgfpathlineto{\pgfqpoint{0cm}{1.94cm}}
\pgfpathclose
\pgfusepath{clip}
\begin{pgfscope}
\begin{pgfscope}
\pgfpathmoveto{\pgfqpoint{0cm}{-0.035cm}}
\pgfpathlineto{\pgfqpoint{1.976cm}{-0.035cm}}
\pgfpathlineto{\pgfqpoint{1.976cm}{1.94cm}}
\pgfpathlineto{\pgfqpoint{0cm}{1.94cm}}
\pgfpathclose
\pgfusepath{clip}
\begin{pgfscope}
\begin{pgfscope}
\pgfsetdash{}{0cm}
\pgfsetlinewidth{0.818mm}
\pgfsetroundcap
\pgfsetroundjoin
\pgfsetmiterlimit{7.0}
\definecolor{eps2pgf_color}{gray}{0}\pgfsetstrokecolor{eps2pgf_color}\pgfsetfillcolor{eps2pgf_color}
\pgfpathmoveto{\pgfqpoint{0.117cm}{1.815cm}}
\pgfpathlineto{\pgfqpoint{0.682cm}{1.065cm}}
\pgfpathlineto{\pgfqpoint{1.246cm}{1.815cm}}
\pgfusepath{stroke}
\end{pgfscope}
\definecolor{eps2pgf_color}{gray}{0}\pgfsetstrokecolor{eps2pgf_color}\pgfsetfillcolor{eps2pgf_color}
\pgfpathmoveto{\pgfqpoint{0.273cm}{1.789cm}}
\pgfpathcurveto{\pgfqpoint{0.273cm}{1.825cm}}{\pgfqpoint{0.259cm}{1.86cm}}{\pgfqpoint{0.233cm}{1.886cm}}
\pgfpathcurveto{\pgfqpoint{0.207cm}{1.912cm}}{\pgfqpoint{0.173cm}{1.926cm}}{\pgfqpoint{0.137cm}{1.926cm}}
\pgfpathcurveto{\pgfqpoint{0.1cm}{1.926cm}}{\pgfqpoint{0.066cm}{1.912cm}}{\pgfqpoint{0.04cm}{1.886cm}}
\pgfpathcurveto{\pgfqpoint{0.014cm}{1.86cm}}{\pgfqpoint{0cm}{1.825cm}}{\pgfqpoint{0cm}{1.789cm}}
\pgfpathcurveto{\pgfqpoint{0cm}{1.753cm}}{\pgfqpoint{0.014cm}{1.718cm}}{\pgfqpoint{0.04cm}{1.692cm}}
\pgfpathcurveto{\pgfqpoint{0.066cm}{1.667cm}}{\pgfqpoint{0.1cm}{1.652cm}}{\pgfqpoint{0.137cm}{1.652cm}}
\pgfpathcurveto{\pgfqpoint{0.173cm}{1.652cm}}{\pgfqpoint{0.207cm}{1.667cm}}{\pgfqpoint{0.233cm}{1.692cm}}
\pgfpathcurveto{\pgfqpoint{0.259cm}{1.718cm}}{\pgfqpoint{0.273cm}{1.753cm}}{\pgfqpoint{0.273cm}{1.789cm}}
\pgfusepath{fill}
\begin{pgfscope}
\pgfsetdash{}{0cm}
\pgfsetlinewidth{0.818mm}
\pgfsetmiterlimit{7.0}
\pgfpathmoveto{\pgfqpoint{0.682cm}{1.065cm}}
\pgfpathlineto{\pgfqpoint{0.679cm}{1.812cm}}
\pgfusepath{stroke}
\end{pgfscope}
\pgfpathmoveto{\pgfqpoint{0.815cm}{1.793cm}}
\pgfpathcurveto{\pgfqpoint{0.815cm}{1.829cm}}{\pgfqpoint{0.801cm}{1.864cm}}{\pgfqpoint{0.775cm}{1.89cm}}
\pgfpathcurveto{\pgfqpoint{0.75cm}{1.915cm}}{\pgfqpoint{0.715cm}{1.93cm}}{\pgfqpoint{0.679cm}{1.93cm}}
\pgfpathcurveto{\pgfqpoint{0.643cm}{1.93cm}}{\pgfqpoint{0.608cm}{1.915cm}}{\pgfqpoint{0.582cm}{1.89cm}}
\pgfpathcurveto{\pgfqpoint{0.557cm}{1.864cm}}{\pgfqpoint{0.542cm}{1.829cm}}{\pgfqpoint{0.542cm}{1.793cm}}
\pgfpathcurveto{\pgfqpoint{0.542cm}{1.756cm}}{\pgfqpoint{0.557cm}{1.722cm}}{\pgfqpoint{0.582cm}{1.696cm}}
\pgfpathcurveto{\pgfqpoint{0.608cm}{1.67cm}}{\pgfqpoint{0.643cm}{1.656cm}}{\pgfqpoint{0.679cm}{1.656cm}}
\pgfpathcurveto{\pgfqpoint{0.715cm}{1.656cm}}{\pgfqpoint{0.75cm}{1.67cm}}{\pgfqpoint{0.775cm}{1.696cm}}
\pgfpathcurveto{\pgfqpoint{0.801cm}{1.722cm}}{\pgfqpoint{0.815cm}{1.756cm}}{\pgfqpoint{0.815cm}{1.793cm}}
\pgfusepath{fill}
\pgfpathmoveto{\pgfqpoint{1.345cm}{1.765cm}}
\pgfpathcurveto{\pgfqpoint{1.345cm}{1.801cm}}{\pgfqpoint{1.331cm}{1.836cm}}{\pgfqpoint{1.305cm}{1.862cm}}
\pgfpathcurveto{\pgfqpoint{1.28cm}{1.887cm}}{\pgfqpoint{1.245cm}{1.902cm}}{\pgfqpoint{1.209cm}{1.902cm}}
\pgfpathcurveto{\pgfqpoint{1.172cm}{1.902cm}}{\pgfqpoint{1.138cm}{1.887cm}}{\pgfqpoint{1.112cm}{1.862cm}}
\pgfpathcurveto{\pgfqpoint{1.087cm}{1.836cm}}{\pgfqpoint{1.072cm}{1.801cm}}{\pgfqpoint{1.072cm}{1.765cm}}
\pgfpathcurveto{\pgfqpoint{1.072cm}{1.728cm}}{\pgfqpoint{1.087cm}{1.694cm}}{\pgfqpoint{1.112cm}{1.668cm}}
\pgfpathcurveto{\pgfqpoint{1.138cm}{1.642cm}}{\pgfqpoint{1.172cm}{1.628cm}}{\pgfqpoint{1.209cm}{1.628cm}}
\pgfpathcurveto{\pgfqpoint{1.245cm}{1.628cm}}{\pgfqpoint{1.28cm}{1.642cm}}{\pgfqpoint{1.305cm}{1.668cm}}
\pgfpathcurveto{\pgfqpoint{1.331cm}{1.694cm}}{\pgfqpoint{1.345cm}{1.728cm}}{\pgfqpoint{1.345cm}{1.765cm}}
\pgfusepath{fill}
\begin{pgfscope}
\pgfsetdash{}{0cm}
\pgfsetlinewidth{0.818mm}
\pgfsetroundcap
\pgfsetroundjoin
\pgfsetmiterlimit{7.0}
\pgfpathmoveto{\pgfqpoint{0.682cm}{1.065cm}}
\pgfpathlineto{\pgfqpoint{1.246cm}{0.315cm}}
\pgfpathlineto{\pgfqpoint{1.811cm}{1.065cm}}
\pgfusepath{stroke}
\end{pgfscope}
\pgfpathmoveto{\pgfqpoint{1.948cm}{1.065cm}}
\pgfpathcurveto{\pgfqpoint{1.948cm}{1.101cm}}{\pgfqpoint{1.933cm}{1.136cm}}{\pgfqpoint{1.907cm}{1.162cm}}
\pgfpathcurveto{\pgfqpoint{1.882cm}{1.187cm}}{\pgfqpoint{1.847cm}{1.202cm}}{\pgfqpoint{1.811cm}{1.202cm}}
\pgfpathcurveto{\pgfqpoint{1.775cm}{1.202cm}}{\pgfqpoint{1.74cm}{1.187cm}}{\pgfqpoint{1.714cm}{1.162cm}}
\pgfpathcurveto{\pgfqpoint{1.689cm}{1.136cm}}{\pgfqpoint{1.674cm}{1.101cm}}{\pgfqpoint{1.674cm}{1.065cm}}
\pgfpathcurveto{\pgfqpoint{1.674cm}{1.029cm}}{\pgfqpoint{1.689cm}{0.994cm}}{\pgfqpoint{1.714cm}{0.968cm}}
\pgfpathcurveto{\pgfqpoint{1.74cm}{0.942cm}}{\pgfqpoint{1.775cm}{0.928cm}}{\pgfqpoint{1.811cm}{0.928cm}}
\pgfpathcurveto{\pgfqpoint{1.847cm}{0.928cm}}{\pgfqpoint{1.882cm}{0.942cm}}{\pgfqpoint{1.907cm}{0.968cm}}
\pgfpathcurveto{\pgfqpoint{1.933cm}{0.994cm}}{\pgfqpoint{1.948cm}{1.029cm}}{\pgfqpoint{1.948cm}{1.065cm}}
\pgfusepath{fill}
\begin{pgfscope}
\pgfsetdash{}{0cm}
\pgfsetlinewidth{0.818mm}
\pgfsetmiterlimit{7.0}
\pgfpathmoveto{\pgfqpoint{1.246cm}{0.315cm}}
\pgfpathlineto{\pgfqpoint{1.244cm}{1.061cm}}
\pgfusepath{stroke}
\end{pgfscope}
\pgfpathmoveto{\pgfqpoint{1.38cm}{1.065cm}}
\pgfpathcurveto{\pgfqpoint{1.38cm}{1.101cm}}{\pgfqpoint{1.366cm}{1.136cm}}{\pgfqpoint{1.34cm}{1.162cm}}
\pgfpathcurveto{\pgfqpoint{1.315cm}{1.187cm}}{\pgfqpoint{1.28cm}{1.202cm}}{\pgfqpoint{1.244cm}{1.202cm}}
\pgfpathcurveto{\pgfqpoint{1.207cm}{1.202cm}}{\pgfqpoint{1.173cm}{1.187cm}}{\pgfqpoint{1.147cm}{1.162cm}}
\pgfpathcurveto{\pgfqpoint{1.121cm}{1.136cm}}{\pgfqpoint{1.107cm}{1.101cm}}{\pgfqpoint{1.107cm}{1.065cm}}
\pgfpathcurveto{\pgfqpoint{1.107cm}{1.029cm}}{\pgfqpoint{1.121cm}{0.994cm}}{\pgfqpoint{1.147cm}{0.968cm}}
\pgfpathcurveto{\pgfqpoint{1.173cm}{0.942cm}}{\pgfqpoint{1.207cm}{0.928cm}}{\pgfqpoint{1.244cm}{0.928cm}}
\pgfpathcurveto{\pgfqpoint{1.28cm}{0.928cm}}{\pgfqpoint{1.315cm}{0.942cm}}{\pgfqpoint{1.34cm}{0.968cm}}
\pgfpathcurveto{\pgfqpoint{1.366cm}{0.994cm}}{\pgfqpoint{1.38cm}{1.029cm}}{\pgfqpoint{1.38cm}{1.065cm}}
\pgfusepath{fill}
\begin{pgfscope}
\pgfsetdash{}{0cm}
\pgfsetlinewidth{0.818mm}
\pgfsetmiterlimit{4.0}
\pgfpathmoveto{\pgfqpoint{1.383cm}{0.178cm}}
\pgfpathcurveto{\pgfqpoint{1.383cm}{0.214cm}}{\pgfqpoint{1.369cm}{0.249cm}}{\pgfqpoint{1.343cm}{0.275cm}}
\pgfpathcurveto{\pgfqpoint{1.317cm}{0.3cm}}{\pgfqpoint{1.283cm}{0.315cm}}{\pgfqpoint{1.246cm}{0.315cm}}
\pgfpathcurveto{\pgfqpoint{1.21cm}{0.315cm}}{\pgfqpoint{1.175cm}{0.3cm}}{\pgfqpoint{1.15cm}{0.275cm}}
\pgfpathcurveto{\pgfqpoint{1.124cm}{0.249cm}}{\pgfqpoint{1.11cm}{0.214cm}}{\pgfqpoint{1.11cm}{0.178cm}}
\pgfpathcurveto{\pgfqpoint{1.11cm}{0.141cm}}{\pgfqpoint{1.124cm}{0.107cm}}{\pgfqpoint{1.15cm}{0.081cm}}
\pgfpathcurveto{\pgfqpoint{1.175cm}{0.055cm}}{\pgfqpoint{1.21cm}{0.041cm}}{\pgfqpoint{1.246cm}{0.041cm}}
\pgfpathcurveto{\pgfqpoint{1.283cm}{0.041cm}}{\pgfqpoint{1.317cm}{0.055cm}}{\pgfqpoint{1.343cm}{0.081cm}}
\pgfpathcurveto{\pgfqpoint{1.369cm}{0.107cm}}{\pgfqpoint{1.383cm}{0.141cm}}{\pgfqpoint{1.383cm}{0.178cm}}
\pgfusepath{stroke}
\end{pgfscope}
\end{pgfscope}
\end{pgfscope}
\end{pgfscope}
\end{tikzpicture}}}$ namely $(14\gamma-15)/2-$ and therefore in order to control $\langle\phi_{M,\varepsilon},Z_{M,\varepsilon}\rangle$ we must require $ \gamma>21/22$. 
 In this case
   the fractional energy estimate of Theorem~\ref{th:energy-estimate-int} carries through and provides a priori estimates for $\psi_{M,\varepsilon}$ in weighted $H^{\gamma}$ and as a consequence a similar estimate holds for $\zeta_{M,\varepsilon}$ in the same space. The proof of the stretched exponential integrability works as well but the exponent becomes worse due to the limited regularity of the stochastic terms. Moreover, the improved tightness in Section~\ref{s:reg} remains unchanged and yields the corresponding regularity.
Therefore, mutatis mutandis we conclude the following results.

\begin{theorem}
  \label{th:main-frac}Let $\gamma \in (21/22,1)$. There exists a choice of the sequence $(a_{M, \varepsilon},
  b_{M, \varepsilon})_{M, \varepsilon}$ such that for any $\lambda > 0$ and
  $m^2 \in \mathbb{R}$, the family of measures $(\nu^\gamma_{M, \varepsilon})_{M,
  \varepsilon}$ appropriately extended to $\mathcal{S}' (\mathbb{R}^3)$ is tight. 
  All the consequences stated  in Theorem~\ref{th:main} carry on to  
  every accumulation point $\nu$ of this family of measures except from the fact that the exponential integrability holds for some $\upsilon\in(0,1)$ not necessarily of order $\kappa$. 
\end{theorem}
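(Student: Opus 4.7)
The plan is to transpose the entire argument for Theorem~\ref{th:main} to the fractional setting, replacing the (discrete) Laplacian by $(-\Delta_\varepsilon)^\gamma$ and carefully tracking how the regularity of the stochastic objects and the Schauder gain change. First I would set up the stationary dynamics \eqref{eq:frac-moll} on a common probability space, define the fractional analogues of every element of $\mathbb{X}_{M,\varepsilon}$, and verify that they live in weighted Besov spaces with the regularities announced in the discussion preceding the theorem (i.e.\ $X_{M,\varepsilon}\in \CC^{(2\gamma-3)/2-}$, $\llbracket X_{M,\varepsilon}^2\rrbracket\in \CC^{2\gamma-3-}$, $\tthreeone{X}_{M,\varepsilon}\in \CC^{(10\gamma-9)/2-}$, $\tthreethreer{X}_{M,\varepsilon}\in \CC^{(14\gamma-15)/2-}$ and similarly for the remaining trees), uniformly in $M$ and $\varepsilon$. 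These bounds require only that the parabolic scaling underlying the Schauder theory of $\LL_\varepsilon^\gamma$ be $(2\gamma)$-homogeneous, which is the only structural change with respect to the case $\gamma=1$.

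Next I would redo the paracontrolled decomposition $\varphi_{M,\varepsilon} = X_{M,\varepsilon}+Y_{M,\varepsilon}+\phi_{M,\varepsilon}$, defining $Y_{M,\varepsilon}$ via the fixed-point equation of Lemma~\ref{lem:Y1} with the same localizer $\UU^\varepsilon_{>}$, and derive the analogue of the energy identity of Lemma~\ref{lem:energy12} with $\psi_{M,\varepsilon}= \phi_{M,\varepsilon}+(\Q_\varepsilon^\gamma)^{-1}[3\lambda \llbracket X^2_{M,\varepsilon}\rrbracket \succ \phi_{M,\varepsilon}]$; this now yields a priori control of $\psi_{M,\varepsilon}$ in weighted $H^\gamma$ instead of $H^1$, and hence of $\zeta_{M,\varepsilon}$ in the same space. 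Each of the nine estimates in Lemma~\ref{lemma:bounds-rhs1} then has to be redone with the new interpolation exponents between $L^4(\rho)$ and $H^{\gamma-2\kappa}(\rho^2)$. The constraint $\gamma>9/10$ comes from requiring $(8\gamma-6)+(2\gamma-3)>0$, so that the commutators $D_{\rho^4,\varepsilon}(\phi,-3\lambda\llbracket X^2\rrbracket,\phi)$ and $\tilde C_\varepsilon(\phi,\llbracket X^2\rrbracket,\llbracket X^2\rrbracket)$ can be closed by Lemma~\ref{lem:dual1} and Lemma~\ref{lem:comm1}, while the sharper threshold $\gamma>21/22$ is dictated by the duality pairing $\langle \rho^4 \phi,Z_{M,\varepsilon}\rangle$: here the most singular contribution is carried by $\tthreethreer{X}$, whose regularity $(14\gamma-15)/2$ must remain positive enough to absorb the loss against $\phi$ in $H^\gamma$. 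The hardest part of the argument will be this last bound, which fixes the subcriticality threshold accessible by the first-order paracontrolled expansion used here; higher-order expansions in the spirit of regularity structures would presumably allow to lower it further toward $\gamma=3/4$.

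Once the fractional analogue of Theorem~\ref{th:energy-estimate} is in place, the tightness of $(\nu^\gamma_{M,\varepsilon}\circ(\mathcal{E}^\varepsilon)^{-1})_{M,\varepsilon}$ and the moment estimates of Theorem~\ref{thm:tight} and Theorem~\ref{thm:main} follow by the same argument with the obvious replacements. Reflection positivity of every accumulation point is provided by the corollary stated immediately before the theorem; translation invariance, non-Gaussianity (for which one only needs a perturbative computation of the connected four-point function against Littlewood--Paley blocks, whose leading order now scales like $2^{j(2\gamma-1)}$ rather than $2^j$) and the integration by parts formula together with the Dyson--Schwinger hierarchy transpose verbatim from Sections~\ref{ss:OS1}, \ref{ss:nonG} and~\ref{s:sd}, using the fractional Schauder estimates in the improved tightness of Proposition~\ref{prop:reg}. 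For the distribution property~\textbf{OS0} I would redo the argument of Section~\ref{s:exp} introducing a momentum cut-off on $\llbracket X_{M,\varepsilon}^3\rrbracket$; however, because the powers of $\|\mathbb{X}_{M,\varepsilon}\|$ appearing in the optimized bounds now depend on $\gamma$ in a less favourable way, the optimal choice of the cut-off parameter $K=K(\gamma,\kappa)$ no longer drives the exponent $\upsilon$ down to $O(\kappa)$. This loss is precisely the reason for the weaker quantitative formulation of the stretched exponential integrability in the theorem, and constitutes the only genuine departure from the $\gamma=1$ statement.
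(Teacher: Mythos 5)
Your proposal follows the paper's approach exactly: redo every step of Sections 3--5 with $\LL_\varepsilon$ replaced by $\LL_\varepsilon^\gamma$, track the shifted regularities of the trees, and check that every commutator and duality pairing still closes. Two of the scaling bookkeeping claims, however, are wrong, and one of them is the very mechanism that produces the threshold $\gamma>21/22$.

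You assert that after the fractional energy estimate the quantity that pairs against $Z_{M,\varepsilon}$ is $\phi$ in weighted $H^\gamma$. That is not the case. The energy method gives $\psi_{M,\varepsilon}$ in $H^\gamma(\rho^2)$ (the quadratic form $\langle \psi, \Q^\gamma_\varepsilon \psi\rangle$ controls $m^2\|\psi\|_{L^2}^2 + \|(-\Delta_\varepsilon)^{\gamma/2}\psi\|_{L^2}^2$), but the resulting regularity of $\phi$ via the analogue of~\eqref{eq:17} is only $4\gamma-3-O(\kappa)$, since $(\Q^\gamma_\varepsilon)^{-1}[\llbracket X^2\rrbracket\succ\phi]$ gains $2\gamma$ from $\llbracket X^2\rrbracket\succ\phi\in H^{(2\gamma-3)-\kappa}$, and for $\gamma<1$ one has $4\gamma-3<\gamma$, so it is the paraproduct, not $\psi$, which caps $\phi$. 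You already use the correct regularity $4\gamma-3$ when you explain the $\gamma>9/10$ threshold (that is where the factor $8\gamma-6=2(4\gamma-3)$ in $(8\gamma-6)+(2\gamma-3)>0$ comes from), but then silently switch to $H^\gamma$ for the pairing with $Z$. With your stated $H^\gamma$ regularity the positivity condition $(4\gamma-3)+\tfrac{14\gamma-15}{2}>0$ would be replaced by $\gamma+\tfrac{14\gamma-15}{2}>0$, which gives $\gamma>15/16$, not $21/22$. The correct requirement is $(4\gamma-3)+\tfrac{14\gamma-15}{2}=\tfrac{22\gamma-21}{2}>0$, equivalently $\gamma>21/22$, and the same count controls the resonant product $(\Q^\gamma_\varepsilon)^{-1}\llbracket X^2_{M,\varepsilon}\rrbracket \circ \tthreethreer{X_{M,\varepsilon}}$ whose divergence for $\gamma\leqslant 21/22$ would force an extra renormalization.

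A smaller slip of the same type occurs in your remark on non-Gaussianity: the leading contribution to the connected four-point function tested against Littlewood--Paley blocks is $L(X,X,X,\tthreeone X)$, where now each propagator $(m^2+|k|^{2\gamma})^{-1}$ contributes $-2\gamma$ and the factor $H^{-1}$ another $-2\gamma$, against $+9$ from the three momentum integrals. The leading term therefore scales like $2^{i(9-8\gamma)}$, not $2^{i(2\gamma-1)}$ (the two agree only at $\gamma=1$). This does not affect the conclusion — the remainder is still subdominant — but it illustrates the same bookkeeping error and would propagate into the verification that $R$ is lower order.

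Everything else — reflection positivity via the subordination formula for $(a-\Delta_\varepsilon)^{-\gamma}$, translation invariance, the IBP formula, the weakened stretched exponential integrability — matches the paper's discussion.
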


If $\gamma \leqslant 21/22$ an additional renormalization is needed to treat the divergence of
$$(\Q^{\gamma}_{\varepsilon})^{-1}\llbracket X_{M,\varepsilon}^{2}\rrbracket\circ X_{M,\varepsilon}^{\!\resizebox{!}{.8em}{
\begin{tikzpicture}
\pgfpathmoveto{\pgfqpoint{0cm}{-0.035cm}}
\pgfpathlineto{\pgfqpoint{1.976cm}{-0.035cm}}
\pgfpathlineto{\pgfqpoint{1.976cm}{1.94cm}}
\pgfpathlineto{\pgfqpoint{0cm}{1.94cm}}
\pgfpathclose
\pgfusepath{clip}
\begin{pgfscope}
\begin{pgfscope}
\pgfpathmoveto{\pgfqpoint{0cm}{-0.035cm}}
\pgfpathlineto{\pgfqpoint{1.976cm}{-0.035cm}}
\pgfpathlineto{\pgfqpoint{1.976cm}{1.94cm}}
\pgfpathlineto{\pgfqpoint{0cm}{1.94cm}}
\pgfpathclose
\pgfusepath{clip}
\begin{pgfscope}
\begin{pgfscope}
\pgfsetdash{}{0cm}
\pgfsetlinewidth{0.818mm}
\pgfsetroundcap
\pgfsetroundjoin
\pgfsetmiterlimit{7.0}
\definecolor{eps2pgf_color}{gray}{0}\pgfsetstrokecolor{eps2pgf_color}\pgfsetfillcolor{eps2pgf_color}
\pgfpathmoveto{\pgfqpoint{0.117cm}{1.815cm}}
\pgfpathlineto{\pgfqpoint{0.682cm}{1.065cm}}
\pgfpathlineto{\pgfqpoint{1.246cm}{1.815cm}}
\pgfusepath{stroke}
\end{pgfscope}
\definecolor{eps2pgf_color}{gray}{0}\pgfsetstrokecolor{eps2pgf_color}\pgfsetfillcolor{eps2pgf_color}
\pgfpathmoveto{\pgfqpoint{0.273cm}{1.789cm}}
\pgfpathcurveto{\pgfqpoint{0.273cm}{1.825cm}}{\pgfqpoint{0.259cm}{1.86cm}}{\pgfqpoint{0.233cm}{1.886cm}}
\pgfpathcurveto{\pgfqpoint{0.207cm}{1.912cm}}{\pgfqpoint{0.173cm}{1.926cm}}{\pgfqpoint{0.137cm}{1.926cm}}
\pgfpathcurveto{\pgfqpoint{0.1cm}{1.926cm}}{\pgfqpoint{0.066cm}{1.912cm}}{\pgfqpoint{0.04cm}{1.886cm}}
\pgfpathcurveto{\pgfqpoint{0.014cm}{1.86cm}}{\pgfqpoint{0cm}{1.825cm}}{\pgfqpoint{0cm}{1.789cm}}
\pgfpathcurveto{\pgfqpoint{0cm}{1.753cm}}{\pgfqpoint{0.014cm}{1.718cm}}{\pgfqpoint{0.04cm}{1.692cm}}
\pgfpathcurveto{\pgfqpoint{0.066cm}{1.667cm}}{\pgfqpoint{0.1cm}{1.652cm}}{\pgfqpoint{0.137cm}{1.652cm}}
\pgfpathcurveto{\pgfqpoint{0.173cm}{1.652cm}}{\pgfqpoint{0.207cm}{1.667cm}}{\pgfqpoint{0.233cm}{1.692cm}}
\pgfpathcurveto{\pgfqpoint{0.259cm}{1.718cm}}{\pgfqpoint{0.273cm}{1.753cm}}{\pgfqpoint{0.273cm}{1.789cm}}
\pgfusepath{fill}
\begin{pgfscope}
\pgfsetdash{}{0cm}
\pgfsetlinewidth{0.818mm}
\pgfsetmiterlimit{7.0}
\pgfpathmoveto{\pgfqpoint{0.682cm}{1.065cm}}
\pgfpathlineto{\pgfqpoint{0.679cm}{1.812cm}}
\pgfusepath{stroke}
\end{pgfscope}
\pgfpathmoveto{\pgfqpoint{0.815cm}{1.793cm}}
\pgfpathcurveto{\pgfqpoint{0.815cm}{1.829cm}}{\pgfqpoint{0.801cm}{1.864cm}}{\pgfqpoint{0.775cm}{1.89cm}}
\pgfpathcurveto{\pgfqpoint{0.75cm}{1.915cm}}{\pgfqpoint{0.715cm}{1.93cm}}{\pgfqpoint{0.679cm}{1.93cm}}
\pgfpathcurveto{\pgfqpoint{0.643cm}{1.93cm}}{\pgfqpoint{0.608cm}{1.915cm}}{\pgfqpoint{0.582cm}{1.89cm}}
\pgfpathcurveto{\pgfqpoint{0.557cm}{1.864cm}}{\pgfqpoint{0.542cm}{1.829cm}}{\pgfqpoint{0.542cm}{1.793cm}}
\pgfpathcurveto{\pgfqpoint{0.542cm}{1.756cm}}{\pgfqpoint{0.557cm}{1.722cm}}{\pgfqpoint{0.582cm}{1.696cm}}
\pgfpathcurveto{\pgfqpoint{0.608cm}{1.67cm}}{\pgfqpoint{0.643cm}{1.656cm}}{\pgfqpoint{0.679cm}{1.656cm}}
\pgfpathcurveto{\pgfqpoint{0.715cm}{1.656cm}}{\pgfqpoint{0.75cm}{1.67cm}}{\pgfqpoint{0.775cm}{1.696cm}}
\pgfpathcurveto{\pgfqpoint{0.801cm}{1.722cm}}{\pgfqpoint{0.815cm}{1.756cm}}{\pgfqpoint{0.815cm}{1.793cm}}
\pgfusepath{fill}
\pgfpathmoveto{\pgfqpoint{1.345cm}{1.765cm}}
\pgfpathcurveto{\pgfqpoint{1.345cm}{1.801cm}}{\pgfqpoint{1.331cm}{1.836cm}}{\pgfqpoint{1.305cm}{1.862cm}}
\pgfpathcurveto{\pgfqpoint{1.28cm}{1.887cm}}{\pgfqpoint{1.245cm}{1.902cm}}{\pgfqpoint{1.209cm}{1.902cm}}
\pgfpathcurveto{\pgfqpoint{1.172cm}{1.902cm}}{\pgfqpoint{1.138cm}{1.887cm}}{\pgfqpoint{1.112cm}{1.862cm}}
\pgfpathcurveto{\pgfqpoint{1.087cm}{1.836cm}}{\pgfqpoint{1.072cm}{1.801cm}}{\pgfqpoint{1.072cm}{1.765cm}}
\pgfpathcurveto{\pgfqpoint{1.072cm}{1.728cm}}{\pgfqpoint{1.087cm}{1.694cm}}{\pgfqpoint{1.112cm}{1.668cm}}
\pgfpathcurveto{\pgfqpoint{1.138cm}{1.642cm}}{\pgfqpoint{1.172cm}{1.628cm}}{\pgfqpoint{1.209cm}{1.628cm}}
\pgfpathcurveto{\pgfqpoint{1.245cm}{1.628cm}}{\pgfqpoint{1.28cm}{1.642cm}}{\pgfqpoint{1.305cm}{1.668cm}}
\pgfpathcurveto{\pgfqpoint{1.331cm}{1.694cm}}{\pgfqpoint{1.345cm}{1.728cm}}{\pgfqpoint{1.345cm}{1.765cm}}
\pgfusepath{fill}
\begin{pgfscope}
\pgfsetdash{}{0cm}
\pgfsetlinewidth{0.818mm}
\pgfsetroundcap
\pgfsetroundjoin
\pgfsetmiterlimit{7.0}
\pgfpathmoveto{\pgfqpoint{0.682cm}{1.065cm}}
\pgfpathlineto{\pgfqpoint{1.246cm}{0.315cm}}
\pgfpathlineto{\pgfqpoint{1.811cm}{1.065cm}}
\pgfusepath{stroke}
\end{pgfscope}
\pgfpathmoveto{\pgfqpoint{1.948cm}{1.065cm}}
\pgfpathcurveto{\pgfqpoint{1.948cm}{1.101cm}}{\pgfqpoint{1.933cm}{1.136cm}}{\pgfqpoint{1.907cm}{1.162cm}}
\pgfpathcurveto{\pgfqpoint{1.882cm}{1.187cm}}{\pgfqpoint{1.847cm}{1.202cm}}{\pgfqpoint{1.811cm}{1.202cm}}
\pgfpathcurveto{\pgfqpoint{1.775cm}{1.202cm}}{\pgfqpoint{1.74cm}{1.187cm}}{\pgfqpoint{1.714cm}{1.162cm}}
\pgfpathcurveto{\pgfqpoint{1.689cm}{1.136cm}}{\pgfqpoint{1.674cm}{1.101cm}}{\pgfqpoint{1.674cm}{1.065cm}}
\pgfpathcurveto{\pgfqpoint{1.674cm}{1.029cm}}{\pgfqpoint{1.689cm}{0.994cm}}{\pgfqpoint{1.714cm}{0.968cm}}
\pgfpathcurveto{\pgfqpoint{1.74cm}{0.942cm}}{\pgfqpoint{1.775cm}{0.928cm}}{\pgfqpoint{1.811cm}{0.928cm}}
\pgfpathcurveto{\pgfqpoint{1.847cm}{0.928cm}}{\pgfqpoint{1.882cm}{0.942cm}}{\pgfqpoint{1.907cm}{0.968cm}}
\pgfpathcurveto{\pgfqpoint{1.933cm}{0.994cm}}{\pgfqpoint{1.948cm}{1.029cm}}{\pgfqpoint{1.948cm}{1.065cm}}
\pgfusepath{fill}
\begin{pgfscope}
\pgfsetdash{}{0cm}
\pgfsetlinewidth{0.818mm}
\pgfsetmiterlimit{7.0}
\pgfpathmoveto{\pgfqpoint{1.246cm}{0.315cm}}
\pgfpathlineto{\pgfqpoint{1.244cm}{1.061cm}}
\pgfusepath{stroke}
\end{pgfscope}
\pgfpathmoveto{\pgfqpoint{1.38cm}{1.065cm}}
\pgfpathcurveto{\pgfqpoint{1.38cm}{1.101cm}}{\pgfqpoint{1.366cm}{1.136cm}}{\pgfqpoint{1.34cm}{1.162cm}}
\pgfpathcurveto{\pgfqpoint{1.315cm}{1.187cm}}{\pgfqpoint{1.28cm}{1.202cm}}{\pgfqpoint{1.244cm}{1.202cm}}
\pgfpathcurveto{\pgfqpoint{1.207cm}{1.202cm}}{\pgfqpoint{1.173cm}{1.187cm}}{\pgfqpoint{1.147cm}{1.162cm}}
\pgfpathcurveto{\pgfqpoint{1.121cm}{1.136cm}}{\pgfqpoint{1.107cm}{1.101cm}}{\pgfqpoint{1.107cm}{1.065cm}}
\pgfpathcurveto{\pgfqpoint{1.107cm}{1.029cm}}{\pgfqpoint{1.121cm}{0.994cm}}{\pgfqpoint{1.147cm}{0.968cm}}
\pgfpathcurveto{\pgfqpoint{1.173cm}{0.942cm}}{\pgfqpoint{1.207cm}{0.928cm}}{\pgfqpoint{1.244cm}{0.928cm}}
\pgfpathcurveto{\pgfqpoint{1.28cm}{0.928cm}}{\pgfqpoint{1.315cm}{0.942cm}}{\pgfqpoint{1.34cm}{0.968cm}}
\pgfpathcurveto{\pgfqpoint{1.366cm}{0.994cm}}{\pgfqpoint{1.38cm}{1.029cm}}{\pgfqpoint{1.38cm}{1.065cm}}
\pgfusepath{fill}
\begin{pgfscope}
\pgfsetdash{}{0cm}
\pgfsetlinewidth{0.818mm}
\pgfsetmiterlimit{4.0}
\pgfpathmoveto{\pgfqpoint{1.383cm}{0.178cm}}
\pgfpathcurveto{\pgfqpoint{1.383cm}{0.214cm}}{\pgfqpoint{1.369cm}{0.249cm}}{\pgfqpoint{1.343cm}{0.275cm}}
\pgfpathcurveto{\pgfqpoint{1.317cm}{0.3cm}}{\pgfqpoint{1.283cm}{0.315cm}}{\pgfqpoint{1.246cm}{0.315cm}}
\pgfpathcurveto{\pgfqpoint{1.21cm}{0.315cm}}{\pgfqpoint{1.175cm}{0.3cm}}{\pgfqpoint{1.15cm}{0.275cm}}
\pgfpathcurveto{\pgfqpoint{1.124cm}{0.249cm}}{\pgfqpoint{1.11cm}{0.214cm}}{\pgfqpoint{1.11cm}{0.178cm}}
\pgfpathcurveto{\pgfqpoint{1.11cm}{0.141cm}}{\pgfqpoint{1.124cm}{0.107cm}}{\pgfqpoint{1.15cm}{0.081cm}}
\pgfpathcurveto{\pgfqpoint{1.175cm}{0.055cm}}{\pgfqpoint{1.21cm}{0.041cm}}{\pgfqpoint{1.246cm}{0.041cm}}
\pgfpathcurveto{\pgfqpoint{1.283cm}{0.041cm}}{\pgfqpoint{1.317cm}{0.055cm}}{\pgfqpoint{1.343cm}{0.081cm}}
\pgfpathcurveto{\pgfqpoint{1.369cm}{0.107cm}}{\pgfqpoint{1.383cm}{0.141cm}}{\pgfqpoint{1.383cm}{0.178cm}}
\pgfusepath{stroke}
\end{pgfscope}
\end{pgfscope}
\end{pgfscope}
\end{pgfscope}
\end{tikzpicture}}}.$$
In general, when $\gamma \in (3/4,21/22]$ more complex expansions and renormalizations are needed, either by exploiting the iterated commutator methods of Bailleul and Bernicot~\cite{bailleul_high_2016} or full fledged regularity structures~\cite{hairer_theory_2014,hairer_discretisations_2018}. While it is not clear that the local estimates of Moinat and Weber~\cite{moinat_space_time_2018} apply to the fractional Laplacian (which is a non-local operator), our energy method could be conceivably adapted to the regularity  structures framework. 
We prefer to leave these more substantial extensions to further  investigations.

\appendix
\section{Technical results}

\label{s:app}
In this section we present  auxiliary results needed in
the main body of the paper.

\subsection{Besov spaces}
First, we cover various properties of the
discrete weighted Besov spaces such as an equivalent formulation of the norms,
duality, interpolation, embeddings, bounds for powers of functions and a
weighted Young's inequality.

\begin{lemma}
  \label{lem:equiv2}Let $\alpha \in \mathbb{R}$, $p, q \in [1, \infty]$. Fix
  $n > | \alpha |$ and assume that $\rho$ is a weight such that
  \[ \| \rho \|_{B^{n + 1, \varepsilon}_{\infty, \infty} (\rho^{- 1})} + \|
     \rho^{- 1} \|_{B^{n + 1, \varepsilon}_{\infty, \infty} (\rho)} \lesssim 1
  \]
  uniformly in $\varepsilon$. Then
  \[ \| f \|_{B^{\alpha, \varepsilon}_{p, q} (\rho)} \sim \| \rho f
     \|_{B^{\alpha, \varepsilon}_{p, q}}, \]
  where the proportionality constant does not depend on $\varepsilon$.
\end{lemma}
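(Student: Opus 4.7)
The strategy is a paraproduct-based commutator argument, essentially showing that the multiplication-by-$\rho$ operator almost commutes with each Littlewood--Paley block $\Delta_j^\varepsilon$, with an error controlled by the smoothness of $\rho$ in the weighted Besov scale. Since the hypothesis is symmetric in $\rho$ and $\rho^{-1}$, it is enough to prove one direction, e.g.\ $\|\rho f\|_{B^{\alpha,\varepsilon}_{p,q}}\lesssim \|f\|_{B^{\alpha,\varepsilon}_{p,q}(\rho)}$: the reverse inequality follows by replacing $f$ with $\rho^{-1} g$ and swapping the roles of $\rho$ and $\rho^{-1}$, using $\|\rho^{-1}\|_{B^{n+1,\varepsilon}_{\infty,\infty}(\rho)}\lesssim 1$.

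\medskip

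\noindent First, decompose the product using Bony's formula,
\[
\rho f = \rho \prec f + f \prec \rho + \rho \circ f,
\]
and estimate the three contributions to $\|\Delta_j^\varepsilon(\rho f)\|_{L^{p,\varepsilon}}$ separately. For the low--high paraproduct $\rho \prec f$, the block $\Delta_j^\varepsilon(\rho\prec f)$ is essentially $S^{\varepsilon}_{j-1}\rho\cdot \Delta_j^\varepsilon f$, and one bounds it pointwise by $\|\rho^{-1} S^{\varepsilon}_{j-1}\rho\|_{L^{\infty,\varepsilon}}\, |\rho\,\Delta_j^\varepsilon f|$; the multiplier factor is uniformly bounded thanks to $\|\rho\|_{B^{n+1,\varepsilon}_{\infty,\infty}(\rho^{-1})}\lesssim 1$ (in particular $\rho\in L^{\infty,\varepsilon}(\rho^{-1})$). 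For the high--low piece $f\prec \rho$ and the resonant piece $\rho\circ f$, the output frequency is dictated by the frequency of $\rho$, so after applying $\Delta_j^\varepsilon$ we only see contributions from blocks of $\rho$ at scale $\gtrsim 2^j$; the regularity assumption on $\rho$ produces a geometric factor $2^{-j(n+1)}$ that, combined with $n>|\alpha|$, yields the summable bound
\[
\sum_{j}2^{\alpha j q}\bigl(\|\Delta_j^\varepsilon(f\prec\rho)\|_{L^{p,\varepsilon}}^{q} +\|\Delta_j^\varepsilon(\rho\circ f)\|_{L^{p,\varepsilon}}^{q}\bigr) \lesssim \|f\|_{B^{\alpha,\varepsilon}_{p,q}(\rho)}^{q}.
\]

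\medskip

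\noindent The single real obstacle is the low--high term, where the naive pointwise bound loses nothing only if $S_{j-1}^\varepsilon\rho$ is comparable to $\rho$ at every point and every scale. The standard way to justify this is to use admissibility of the weight (which our polynomial weights satisfy by $\rho(x)/\rho(y)\lesssim\rho^{-1}(x-y)$) together with the fact that the kernel of $S_{j-1}^\varepsilon$ has moment bounds, so that $|\rho^{-1}(x)S_{j-1}^\varepsilon\rho(x)|\lesssim \int|K^\varepsilon_{<j}(x-y)|\rho^{-1}(x-y)\mathrm{d} y\lesssim 1$ uniformly in $x$ and $\varepsilon$; equivalently, the hypothesis $\rho\in B^{n+1,\varepsilon}_{\infty,\infty}(\rho^{-1})$ delivers exactly this control since $n+1>0$. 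The discreteness of the lattice is harmless because the uniform Fourier-multiplier/kernel bounds on $\Delta_j^\varepsilon, S_j^\varepsilon$ hold uniformly in $\varepsilon\in\mathcal{A}$ (this is precisely the content of the periodic partition of unity constructed in Section~\ref{s:not}, see \cite{MP17}).

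\medskip

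\noindent Putting the three estimates together gives $\|\rho f\|_{B^{\alpha,\varepsilon}_{p,q}}\lesssim \|f\|_{B^{\alpha,\varepsilon}_{p,q}(\rho)}$, with a constant independent of $\varepsilon$, and the reverse estimate follows by the symmetric application to $g=\rho f$ and $\rho^{-1}$. I expect the main technical point to be keeping all multiplier/kernel estimates uniform in $\varepsilon$ on the lattice, but since $\rho$ is smooth and our partition of unity in \eqref{eq:p1} is tailored so that $\Delta_j^\varepsilon$ acts as its continuum counterpart for $j\ll J_\varepsilon$, this is essentially bookkeeping rather than a genuine difficulty.
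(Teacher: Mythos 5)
Your proposal is correct and follows essentially the same route as the paper: Bony decomposition of $\rho f$, low--high term controlled by $\|\rho\|_{L^{\infty,\varepsilon}(\rho^{-1})}$, high--low and resonant terms controlled via the $B^{n+1,\varepsilon}_{\infty,\infty}(\rho^{-1})$ regularity of $\rho$ with the condition $n>|\alpha|$ ensuring summability, and the reverse inequality by swapping $\rho\leftrightarrow\rho^{-1}$. The only cosmetic difference is that the paper invokes the weighted paraproduct estimates of Martin--Perkowski abstractly (grouping $\succ$ and $\circ$ as $\succcurlyeq$), whereas you unfold them at the Littlewood--Paley block level; the underlying mechanics are identical.
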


\begin{proof}
  We write $\rho f = \rho \prec f + \rho \succcurlyeq f$ and estimate by
  paraproduct estimates
  \[ \| \rho \prec f \|_{B^{\alpha, \varepsilon}_{p, q}} = \| \rho \prec f
     \|_{B^{\alpha, \varepsilon}_{p, q} (\rho^{- 1} \rho)} \lesssim \| \rho
     \|_{L^{\infty, \varepsilon} (\rho^{- 1})} \| f \|_{B^{\alpha,
     \varepsilon}_{p, q} (\rho)} \lesssim \| f \|_{B^{\alpha, \varepsilon}_{p,
     q} (\rho)}, \]
  \[ \| \rho \succcurlyeq f \|_{B^{\alpha, \varepsilon}_{p, q}} = \| \rho
     \succcurlyeq f \|_{B^{\alpha, \varepsilon}_{p, q} (\rho^{- 1} \rho)}
     \lesssim \| f \|_{B^{\alpha, \varepsilon}_{p, \infty} (\rho)} \| \rho
     \|_{B^{n, \varepsilon}_{\infty, q} (\rho^{- 1})} \lesssim \| f
     \|_{B^{\alpha, \varepsilon}_{p, q} (\rho)} \| \rho \|_{B^{n + 1,
     \varepsilon}_{\infty, \infty} (\rho^{- 1})} \]
  \[ \lesssim \| f \|_{B^{\alpha, \varepsilon}_{p, q} (\rho)}, \]
  which implies one inequality. For the converse one, we write $f = \rho^{- 1}
  \prec (\rho f) + \rho^{- 1} \succcurlyeq (\rho f)$, and estimate
  \[ \| \rho^{- 1} \prec (\rho f) \|_{B^{\alpha, \varepsilon}_{p, q} (\rho)}
     \lesssim \| \rho^{- 1} \|_{L^{\infty, \varepsilon} (\rho)} \| \rho f
     \|_{B^{\alpha, \varepsilon}_{p, q}}, \]
  \[ \| \rho^{- 1} \succcurlyeq (\rho f) \|_{B^{\alpha, \varepsilon}_{p, q}
     (\rho)} \lesssim \| \rho f \|_{B^{\alpha, \varepsilon}_{p, \infty}} \|
     \rho^{- 1} \|_{B^{n, \varepsilon}_{\infty, q} (\rho)} \lesssim \| \rho f
     \|_{B^{\alpha, \varepsilon}_{p, q}} \| \rho^{- 1} \|_{B^{n + 1,
     \varepsilon}_{\infty, \infty} (\rho)} . \]
\end{proof}

\begin{lemma}
  \label{lem:dual2}Let $\alpha \in \mathbb{R}$, $p, p', q, q' \in [1, \infty]$
  such that $p, p'$ and $q, q'$ are conjugate exponents. Let $\rho$ be a
  weight as in Lemma~\ref{lem:equiv2}. Then
  \[ \langle f, g \rangle_{\varepsilon} \lesssim \| f \|_{B_{p, q}^{\alpha,
     \varepsilon} (\rho)} \| g \|_{B_{p', q'}^{- \alpha, \varepsilon}
     (\rho^{- 1})} \]
  with a proportionality constant independent of $\varepsilon$. Consequently,
  $B^{- \alpha, \varepsilon}_{p', q'} (\rho^{- 1}) \subset (B^{\alpha,
  \varepsilon}_{p, q} (\rho^{- 1}))^{\ast}$.
\end{lemma}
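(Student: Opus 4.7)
My plan is to reduce the weighted statement to the unweighted one by pulling the weight inside the Besov norms via Lemma~\ref{lem:equiv2}, and then to prove the unweighted lattice duality by a standard Littlewood--Paley argument. First I would write
\[
\langle f, g\rangle_{\varepsilon} \;=\; \langle \rho f,\, \rho^{-1} g\rangle_{\varepsilon},
\]
which is legitimate pointwise on $\Lambda_\varepsilon$ since $\rho>0$. By Lemma~\ref{lem:equiv2} (whose hypotheses are symmetric in $\rho$ and $\rho^{-1}$) one has $\|\rho f\|_{B^{\alpha,\varepsilon}_{p,q}}\sim \|f\|_{B^{\alpha,\varepsilon}_{p,q}(\rho)}$ and $\|\rho^{-1}g\|_{B^{-\alpha,\varepsilon}_{p',q'}}\sim \|g\|_{B^{-\alpha,\varepsilon}_{p',q'}(\rho^{-1})}$, both uniformly in $\varepsilon$. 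It therefore suffices to prove the unweighted bound
\[
\langle u, v\rangle_{\varepsilon} \;\lesssim\; \|u\|_{B^{\alpha,\varepsilon}_{p,q}}\,\|v\|_{B^{-\alpha,\varepsilon}_{p',q'}},
\]
uniformly in $\varepsilon\in\mathcal{A}$.

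For this unweighted step I would expand both factors using the dyadic partition from Section~\ref{s:not}. Writing $u=\sum_{j}\Delta^{\varepsilon}_{j}u$ and $v=\sum_{k}\Delta^{\varepsilon}_{k}v$, Plancherel on $\Lambda_\varepsilon$ together with the support properties of the periodic partition $(\varphi^{\varepsilon}_{j})$ shows that $\langle \Delta^{\varepsilon}_{j}u, \Delta^{\varepsilon}_{k}v\rangle_{\varepsilon}$ vanishes unless $|j-k|\leq N_{0}$ for some absolute constant $N_{0}$. Consequently,
\[
|\langle u, v\rangle_{\varepsilon}| \;\leq\; \sum_{|j-k|\leq N_{0}} |\langle \Delta^{\varepsilon}_{j}u,\Delta^{\varepsilon}_{k}v\rangle_{\varepsilon}| \;\leq\; \sum_{|j-k|\leq N_{0}} \|\Delta^{\varepsilon}_{j}u\|_{L^{p,\varepsilon}}\,\|\Delta^{\varepsilon}_{k}v\|_{L^{p',\varepsilon}},
\]
by H\"older's inequality in $L^{p,\varepsilon}$. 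Reinserting the factors $2^{\alpha j}$ and $2^{-\alpha k}$ (which differ by a bounded multiplicative factor when $|j-k|\leq N_{0}$) and applying H\"older in the summation index with conjugate exponents $q,q'$ gives
\[
|\langle u, v\rangle_{\varepsilon}| \;\lesssim\; \Bigl(\sum_{j} 2^{\alpha j q}\|\Delta^{\varepsilon}_{j}u\|_{L^{p,\varepsilon}}^{q}\Bigr)^{1/q} \Bigl(\sum_{k} 2^{-\alpha k q'}\|\Delta^{\varepsilon}_{k}v\|_{L^{p',\varepsilon}}^{q'}\Bigr)^{1/q'},
\]
with the obvious modifications when $q=\infty$ or $q'=\infty$. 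This is exactly the required unweighted estimate. Combining with the first step yields the lemma, and the continuous embedding $B^{-\alpha,\varepsilon}_{p',q'}(\rho^{-1})\subset (B^{\alpha,\varepsilon}_{p,q}(\rho))^{\ast}$ is then immediate by assigning to each $g$ the linear functional $f\mapsto \langle f,g\rangle_{\varepsilon}$.

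The only mildly delicate point is the $\varepsilon$-uniformity: one must check that the Fourier-support cut-off $j,k\leq N-J$ built into the lattice Littlewood--Paley decomposition, together with the compatibility of the adjoint pair $(\Delta^{\varepsilon}_{j},\Delta^{\varepsilon}_{k})$ on $\Lambda_\varepsilon$, still yields the quasi-orthogonality $\langle \Delta^{\varepsilon}_{j}u,\Delta^{\varepsilon}_{k}v\rangle_{\varepsilon}=0$ for $|j-k|>N_{0}$ with a constant $N_{0}$ independent of $\varepsilon$; this follows from the construction of the periodic partition in \eqref{eq:p1} but deserves to be stated explicitly. All other steps are entirely routine and the proportionality constants track through without depending on $\varepsilon$.
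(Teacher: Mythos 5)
Your proof is correct and matches the paper's argument essentially line for line: reduce to the unweighted case via Lemma~\ref{lem:equiv2}, decompose both factors into Littlewood--Paley blocks, invoke Parseval/Plancherel on $\Lambda_\varepsilon$ to restrict to near-diagonal indices $i\sim j$, and conclude by H\"older twice (once in $L^{p,\varepsilon}$, once in the dyadic index). Your extra remark that the $\varepsilon$-uniformity of the quasi-orthogonality should be verified for the truncated lattice partition $(\varphi^\varepsilon_j)_{-1\leq j\leq N-J}$ is a sensible caution but is routine, as you yourself note.
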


\begin{proof}
  In view of Lemma~\ref{lem:equiv2} it is sufficient to consider the
  unweighted case. Let $f \in B^{\alpha, \varepsilon}_{p, q}$ and $g \in B^{-
  \alpha, \varepsilon}_{p', q'}$. Then by Parseval's theorem and H{\"o}lder's
  inequality we have
  \[ \varepsilon^d \sum_{x \in \Lambda_{\varepsilon}} f (x) g (x) = \sum_{- 1
     \leqslant i, j \leqslant N - J} \varepsilon^d \sum_{x \in
     \Lambda_{\varepsilon}} \Delta_i^{\varepsilon} f (x)
     \Delta_j^{\varepsilon} g (x) \]
  \[ = \sum_{- 1 \leqslant i, j \leqslant N - J, i \sim j}
     \int_{\hat{\Lambda}_{\varepsilon}} \varphi_i (k) \mathcal{F} f (k)
     \varphi_j (k) \mathcal{F} g (k) \mathd k \]
  \[ = \sum_{- 1 \leqslant i, j \leqslant N - J, i \sim j} 2^{\alpha j} 2^{-
     \alpha j} \varepsilon^d \sum_{x \in \Lambda_{\varepsilon}}
     \Delta_i^{\varepsilon} f (x) \Delta_j^{\varepsilon} g (x) \lesssim \| f
     \|_{B_{p, q}^{\alpha, \varepsilon}} \| g \|_{B_{p', q'}^{- \alpha,
     \varepsilon}} . \]
\end{proof}

\begin{lemma}
  \label{lem:int}Let $\varepsilon \in \mathcal{A}$. Let $\alpha, \alpha_0,
  \alpha_1, \beta, \beta_0, \beta_1 \in \mathbb{R}$, $p, p_0, p_1, q, q_0, q_1
  \in [1, \infty]$ and $\theta \in [0, 1]$ such that
  \[ \alpha = \theta \alpha_0 + (1 - \theta) \alpha_1, \quad \beta = \theta
     \beta_0 + (1 - \theta) \beta_1, \quad \frac{1}{p} = \frac{\theta}{p_0} +
     \frac{1 - \theta}{p_1}, \quad \frac{1}{q} = \frac{\theta}{q_0} + \frac{1
     - \theta}{q_1} . \]
  Then
  \[ \| f \|_{B^{\alpha, \varepsilon}_{p, q} (\rho^{\beta})} \leqslant \| f
     \|^{\theta}_{B^{\alpha_0, \varepsilon}_{p_0, q_0} (\rho^{\beta_0})} \| f
     \|^{1 - \theta}_{B^{\alpha_1, \varepsilon}_{p_1, q_1} (\rho^{\beta_1})} .
  \]
\end{lemma}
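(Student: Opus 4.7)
The plan is to prove Lemma~\ref{lem:int} by applying H\"older's inequality twice: once at the level of a single Littlewood--Paley block to interpolate the weighted $L^p$-norm, and once when summing over $j$ to interpolate the $\ell^q$-summation. Since both conditions $1/p = \theta/p_0 + (1-\theta)/p_1$ and $1/q = \theta/q_0 + (1-\theta)/q_1$ are of interpolation type, the matching H\"older exponents $p_0/(\theta p), p_1/((1-\theta)p)$ and $q_0/(\theta q), q_1/((1-\theta)q)$ are automatically conjugate, which is exactly what is needed. The argument is the standard one in the Besov-space literature (see e.g.\ \cite{BCD}) and the discrete/weighted setting introduces no new difficulty as long as one remains careful with the weight.

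First I would fix $j$ with $-1 \leqslant j \leqslant N-J$ and write, using $\beta = \theta\beta_0 + (1-\theta)\beta_1$,
\[
\rho^\beta |\Delta_j^\varepsilon f| = \bigl(\rho^{\beta_0}|\Delta_j^\varepsilon f|\bigr)^{\theta} \bigl(\rho^{\beta_1}|\Delta_j^\varepsilon f|\bigr)^{1-\theta}.
\]
Applying H\"older's inequality on $L^{p,\varepsilon}$ with conjugate exponents $p_0/(\theta p)$ and $p_1/((1-\theta)p)$ yields
\[
\|\rho^\beta \Delta_j^\varepsilon f\|_{L^{p,\varepsilon}} \leqslant \|\rho^{\beta_0}\Delta_j^\varepsilon f\|_{L^{p_0,\varepsilon}}^{\theta}\, \|\rho^{\beta_1}\Delta_j^\varepsilon f\|_{L^{p_1,\varepsilon}}^{1-\theta}.
\]
Multiplying both sides by $2^{\alpha j} = 2^{\theta\alpha_0 j} \cdot 2^{(1-\theta)\alpha_1 j}$ one obtains the pointwise-in-$j$ bound
\[
2^{\alpha j}\|\rho^\beta \Delta_j^\varepsilon f\|_{L^{p,\varepsilon}} \leqslant \bigl(2^{\alpha_0 j}\|\rho^{\beta_0}\Delta_j^\varepsilon f\|_{L^{p_0,\varepsilon}}\bigr)^{\theta} \bigl(2^{\alpha_1 j}\|\rho^{\beta_1}\Delta_j^\varepsilon f\|_{L^{p_1,\varepsilon}}\bigr)^{1-\theta}.
\]

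Next I would raise this inequality to the power $q$ (with the obvious adjustments if $q=\infty$, where the same computation proceeds by taking a supremum), sum over $-1 \leqslant j \leqslant N-J$, and apply H\"older's inequality on the counting measure in $j$ with conjugate exponents $q_0/(\theta q)$ and $q_1/((1-\theta)q)$, which are indeed conjugate exactly because $1/q = \theta/q_0 + (1-\theta)/q_1$. This gives
\[
\sum_{j} \bigl(2^{\alpha j}\|\rho^\beta \Delta_j^\varepsilon f\|_{L^{p,\varepsilon}}\bigr)^q \leqslant \Bigl(\sum_{j} \bigl(2^{\alpha_0 j}\|\rho^{\beta_0}\Delta_j^\varepsilon f\|_{L^{p_0,\varepsilon}}\bigr)^{q_0}\Bigr)^{\theta q/q_0} \Bigl(\sum_{j} \bigl(2^{\alpha_1 j}\|\rho^{\beta_1}\Delta_j^\varepsilon f\|_{L^{p_1,\varepsilon}}\bigr)^{q_1}\Bigr)^{(1-\theta)q/q_1}.
\]
Taking the $q$-th root yields
\[
\|f\|_{B^{\alpha,\varepsilon}_{p,q}(\rho^\beta)} \leqslant \|f\|_{B^{\alpha_0,\varepsilon}_{p_0,q_0}(\rho^{\beta_0})}^{\theta}\, \|f\|_{B^{\alpha_1,\varepsilon}_{p_1,q_1}(\rho^{\beta_1})}^{1-\theta},
\]
which is the claim. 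The boundary cases $\theta\in\{0,1\}$ are trivial, and the cases where some of $p,q,p_i,q_i$ equal $\infty$ are handled by the standard modifications of H\"older's inequality. No step depends on the mesh size $\varepsilon$, so the bound is uniform in $\varepsilon \in \mathcal{A}$. I do not expect a genuine obstacle here; the only point to be careful about is to check that the two pairs of H\"older exponents really are conjugate, but this is precisely guaranteed by the two interpolation relations in the hypothesis.
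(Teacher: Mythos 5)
Your proof is correct and follows essentially the same two-step Hölder argument as the paper: first H\"older on $L^{p,\varepsilon}$ with conjugate exponents $p_0/(\theta p)$ and $p_1/((1-\theta)p)$ for each block, then H\"older on the $j$-sum with conjugate exponents $q_0/(\theta q)$ and $q_1/((1-\theta)q)$. No meaningful difference from the paper's proof.
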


\begin{proof}
  The proof is a consequence of H{\"o}lder's inequality. Let us show the claim
  for $p$, $p_0$, $p_1$, $q$, $q_0$, $q_1 \in [1, \infty)$ and $\varepsilon \in
  \mathcal{A} \setminus \{ 0 \}$. If some of the exponents $p, p_0, p_1, q,
  q_0, q_1$ are infinite or we are in the continuous setting, the proof
  follows by obvious modifications. We write
  \[ \| \rho^{\beta} \Delta_j^{\varepsilon} f \|_{L^{p, \varepsilon}}^p =
     \varepsilon^d \sum_{x \in \Lambda_{\varepsilon}} | \rho^{\beta}
     \Delta_j^{\varepsilon} f (x) |^p = \varepsilon^d \sum_{k \in
     \Lambda_{\varepsilon}} (\rho^{\theta \beta_0 p} | \Delta_j^{\varepsilon}
     f (x) |^{\theta p}) (\rho^{(1 - \theta) \beta_1 p} |
     \Delta_j^{\varepsilon} f (x) |^{(1 - \theta) p}) \]
  and apply H{\"o}lder's inequality to the conjugate exponents
  $\frac{p_0}{\theta p}$ and $\frac{p_1}{(1 - \theta) p}$ to obtain
  \[ \| \rho^{\beta} \Delta_j^{\varepsilon} f \|_{L^{p, \varepsilon}}^p
     \leqslant \left( \varepsilon^d \sum_{x \in \Lambda_{\varepsilon}}
     \rho^{\beta_0 p_0} | \Delta_j^{\varepsilon} f |^{p_0} \right)^{\theta p /
     p_0} \left( \varepsilon^d \sum_{x \in \Lambda_{\varepsilon}}
     \rho^{\beta_1 p_1} | \Delta_j^{\varepsilon} f |^{p_1} \right)^{(1 -
     \theta) p / p_1} \]
  \[ = \| \Delta_j^{\varepsilon} f \|^{\theta p}_{L^{p_0, \varepsilon}
     (\rho^{\beta_0})} \| \Delta_j^{\varepsilon} f \|^{(1 - \theta)
     p}_{L^{p_1, \varepsilon} (\rho^{\beta_1})} . \]
  Consequently,
  \[ \| f \|^q_{B^{\alpha, \varepsilon}_{p, q} (\rho^{\beta})} \leqslant
     \sum_{- 1 \leqslant j \leqslant N - J} 2^{\alpha k q} \| \rho^{\beta}
     \Delta_j^{\varepsilon} f \|_{L^{p, \varepsilon}}^q \]
  \[ \leqslant \sum_{- 1 \leqslant j \leqslant N - J} \left( 2^{\theta
     \alpha_0 k q} \| \Delta_j^{\varepsilon} f \|^{\theta q}_{L^{p_0,
     \varepsilon} (\rho^{\beta_0})} \right) \left( 2^{(1 - \theta) \alpha_1 k
     q} \| \Delta_j^{\varepsilon} f \|^{(1 - \theta) q}_{L^{p_1, \varepsilon}
     (\rho^{\beta_1})} \right) \]
  and by H{\"o}lder's inequality to the conjugate exponents $\frac{q_0}{\theta
  q}$ and $\frac{q_1}{(1 - \theta) q}$
  \[ \| f \|^q_{B^{\alpha, \varepsilon}_{p, q} (\rho^{\beta})} \]
  \[ \leqslant \left( \sum_{- 1 \leqslant j \leqslant N - J} 2^{\alpha_0 k
     q_0} \| \Delta_j^{\varepsilon} f \|^{q_0}_{L^{p_0, \varepsilon}
     (\rho^{\beta_0})} \right)^{\theta q / q_0} \left( \sum_{- 1 \leqslant j
     \leqslant N - J} 2^{\alpha_1 k q_1} \| \Delta_j^{\varepsilon} f
     \|^{q_1}_{L^{p_1, \varepsilon} (\rho^{\beta_1})} \right)^{(1 - \theta) q
     / q_1} \]
  \[ = \| f \|^{\theta q}_{B^{\alpha_0, \varepsilon}_{p_0, q_0}
     (\rho^{\beta_0})} \| f \|^{(1 - \theta) q}_{B^{\alpha_1,
     \varepsilon}_{p_1, q_1} (\rho^{\beta_1})} . \]
\end{proof}

We note that by our construction of the Littlewood--Paley projectors on
$\Lambda_{\varepsilon}$, in each of the cases $j = - 1$, $j \in \{ 0, \ldots,
N - J - 1 \}$ and $j = N - J$, there exists an $L^1$-kernel $\mathcal{K}$ such
that the Littlewood--Paley block $\Delta^{\varepsilon}_j f$ is given by a
convolution with $2^{j d} \mathcal{K} (2^j \cdummy)$.  For notational simplicity we omit the
dependence of $\mathcal{K}$ on the three cases above.

\begin{lemma}
  \label{lem:emb}Let $\varepsilon \in \mathcal{A}$ and let $\beta > 0$. Then
  \[ L^{2, \varepsilon} (\rho) = B^{0, \varepsilon}_{2, 2} (\rho), \qquad
     L^{4, \varepsilon} (\rho) \subset B^{0, \varepsilon}_{4, \infty} (\rho)
  \]
  and the proportional constants do not depend on $\varepsilon$.
\end{lemma}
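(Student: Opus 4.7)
The statement consists of two claims: the norm equivalence $L^{2,\varepsilon}(\rho) = B^{0,\varepsilon}_{2,2}(\rho)$ and the embedding $L^{4,\varepsilon}(\rho) \subset B^{0,\varepsilon}_{4,\infty}(\rho)$, both with constants independent of $\varepsilon \in \mathcal{A}$. The strategy is to first reduce to the unweighted setting via Lemma~\ref{lem:equiv2}, and then use Parseval's identity for the $L^2$-case and a convolution/Young argument for the $L^4$-embedding.

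For the first identity, I would first establish it without weights, i.e.\ $\|f\|_{L^{2,\varepsilon}} \sim \|f\|_{B^{0,\varepsilon}_{2,2}}$. Since the Littlewood--Paley blocks $\Delta^\varepsilon_j f = \mathcal{F}^{-1}(\varphi^\varepsilon_j \mathcal{F} f)$ are Fourier multipliers and Parseval's theorem holds on $\Lambda_\varepsilon$, we have
\[
\|f\|_{B^{0,\varepsilon}_{2,2}}^2 = \sum_{-1 \leqslant j \leqslant N-J} \int_{\hat{\Lambda}_\varepsilon} |\varphi^\varepsilon_j(k)|^2 |\mathcal{F}f(k)|^2 \mathd k = \int_{\hat{\Lambda}_\varepsilon} \Bigl(\sum_j |\varphi^\varepsilon_j(k)|^2\Bigr) |\mathcal{F}f(k)|^2 \mathd k.
\]
By construction $(\varphi^\varepsilon_j)_j$ is a partition of unity with uniformly finite overlap, so $\sum_j |\varphi^\varepsilon_j(k)|^2$ is bounded above and below by positive constants independent of $\varepsilon$. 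Hence the right hand side is equivalent to $\|\mathcal{F}f\|_{L^2}^2 = \|f\|_{L^{2,\varepsilon}}^2$, with constants uniform in $\varepsilon$. Once this unweighted equivalence is in hand, the weighted case follows from Lemma~\ref{lem:equiv2} applied to the polynomial weight $\rho$: $\|f\|_{B^{0,\varepsilon}_{2,2}(\rho)} \sim \|\rho f\|_{B^{0,\varepsilon}_{2,2}} \sim \|\rho f\|_{L^{2,\varepsilon}} = \|f\|_{L^{2,\varepsilon}(\rho)}$.

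For the embedding $L^{4,\varepsilon}(\rho) \subset B^{0,\varepsilon}_{4,\infty}(\rho)$, the key is that each $\Delta^\varepsilon_j$ acts as convolution with a kernel $K_j(x) = 2^{jd} \mathcal{K}(2^j x)$, where $\mathcal{K}$ belongs to some class with sufficient polynomial decay (a standard fact recorded in Lemma~A.2 of \cite{MP17}, handling separately the cases $j=-1$, $0 \leqslant j \leqslant N-J-1$ and $j = N-J$). Using the admissibility of the weight, $\rho(x) \lesssim \rho(y)\rho^{-1}(x-y)$, I would estimate pointwise
\[
\rho(x)|\Delta^\varepsilon_j f(x)| \lesssim \int |K_j(x-y)| \rho^{-1}(x-y) \rho(y) |f(y)| \mathd y,
\]
and then apply the (discrete) Young inequality to get
\[
\|\rho \Delta^\varepsilon_j f\|_{L^{4,\varepsilon}} \lesssim \|K_j \rho^{-1}\|_{L^{1,\varepsilon}} \, \|\rho f\|_{L^{4,\varepsilon}}.
\]
A change of variables gives $\|K_j \rho^{-1}\|_{L^{1,\varepsilon}} = \int |\mathcal{K}(w)| \rho^{-1}(2^{-j}w) \mathd w$, which is bounded uniformly in $j \geqslant -1$ (and in $\varepsilon$) because $\rho^{-1}$ has polynomial growth and $\mathcal{K}$ decays faster than any polynomial. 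Taking the supremum over $j$ yields $\|f\|_{B^{0,\varepsilon}_{4,\infty}(\rho)} \lesssim \|f\|_{L^{4,\varepsilon}(\rho)}$.

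The only subtle point — and what I expect to be the main bookkeeping obstacle — is ensuring that both the bound on the partition-of-unity sum $\sum_j |\varphi^\varepsilon_j|^2$ and the uniform-in-$j$ bound on $\|K_j \rho^{-1}\|_{L^{1,\varepsilon}}$ are truly independent of $\varepsilon \in \mathcal{A}$, in particular that the truncation at the boundary level $j = N-J$ (where $\varphi^\varepsilon_{N-J} = 1 - \sum_{j<N-J}\varphi_j$ is defined by a different formula) does not spoil uniformity. Since $J$ is fixed and the boundary block still corresponds to a convolution kernel of the same structural form, this case can be handled in the same way, so no genuine new difficulty arises.
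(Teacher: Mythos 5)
Your proof is correct and follows essentially the same route as the paper: the $L^2$ identity via Lemma~\ref{lem:equiv2} together with Parseval, and the $L^4$ embedding via the weighted Young inequality applied to the convolution kernels of the Littlewood--Paley blocks, with the uniform-in-$j,\varepsilon$ bound on $\|K_j\|_{L^{1,\varepsilon}(\rho^{-1})}$. The only cosmetic difference is that you spell out the partition-of-unity square-sum estimate, which the paper leaves implicit in citing Parseval.
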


\begin{proof}
  Due to Lemma~\ref{lem:equiv2} together with Parseval's equality we directly
  obtain the first claim. Consequently, by Young's inequality together with
  the fact that $\frac{\rho (y)}{\rho (x)} \lesssim \rho^{- 1} (x - y)$ (for a
  universal proportionality constant that depends only on $\rho$) we have that
  \[ \| f \|_{B^{0, \varepsilon}_{4, \infty} (\rho)} = \sup_{- 1 \leqslant
     j \leqslant N - J} \| \Delta_j^{\varepsilon} f \|_{L^{4, \varepsilon}
     (\rho)} = \sup_{- 1 \leqslant j \leqslant N - J} \| 2^{j d} \mathcal{K}
     (2^j \cdummy) \ast f \|_{L^{4, \varepsilon} (\rho)} \]
  \[ \lesssim \sup_{- 1 \leqslant j \leqslant N - J} \| 2^{j d} \mathcal{K}
     (2^j \cdummy) \|_{L^{1, \varepsilon} (\rho^{- 1})} \| f \|_{L^{4,
     \varepsilon} (\rho)} \lesssim \| f \|_{L^{4, \varepsilon} (\rho)} . \]
\end{proof}

\begin{lemma}
  \label{lem:grad}Let $\kappa \in (0, 1)$, $p \in [1, \infty]$ and let $\rho$
  be a polynomial weight
  \[ \| f \|_{B^{1 - \kappa, \varepsilon}_{p, p} (\rho)} \lesssim \| f
     \|_{B^{- \kappa, \varepsilon}_{p, p} (\rho)} + \| \nabla_{\varepsilon} f
     \|_{B^{- \kappa, \varepsilon}_{p, p} (\rho)}, \]
  where the proportionality constant does not depend on $\varepsilon$.
\end{lemma}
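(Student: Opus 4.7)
The plan is to establish a uniform discrete Bernstein-type inequality on each Littlewood--Paley block and then sum over the dyadic scales. First I would split the norm $\|f\|_{B^{1-\kappa,\varepsilon}_{p,p}(\rho)}$ into the contribution from $j=-1$ and from $0\le j\le N-J$. The $j=-1$ block is harmless: the weight $2^{(1-\kappa)(-1)p}$ is bounded, so $\|\Delta^{\varepsilon}_{-1}f\|^p_{L^{p,\varepsilon}(\rho)}$ is absorbed by $\|f\|^p_{B^{-\kappa,\varepsilon}_{p,p}(\rho)}$. The remainder of the argument is concerned with proving the block-wise estimate
\[
 2^{j}\|\Delta^{\varepsilon}_{j} f\|_{L^{p,\varepsilon}(\rho)}
 \lesssim \|\Delta^{\varepsilon}_{j}\nabla_{\varepsilon} f\|_{L^{p,\varepsilon}(\rho)},
 \qquad 0\le j\le N-J,
\]
uniformly in $\varepsilon=2^{-N}$, since multiplying by $2^{-j\kappa p}$ and summing over $j$ then yields the conclusion.

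To prove this block inequality I would use a Fourier multiplier decomposition. Let $g_{\varepsilon,l}(k)=(e^{2\pi i\varepsilon k_l}-1)/\varepsilon$ be the symbol of the forward discrete derivative $\partial_{\varepsilon,l}$ so that $|g_{\varepsilon}(k)|^{2}=\sum_{l=1}^{d}|g_{\varepsilon,l}(k)|^{2}=\varepsilon^{-2}\sum_{l}4\sin^{2}(\pi\varepsilon k_{l})$. Choose an enlarged cutoff $\tilde\varphi^{\varepsilon}_{j}$ supported in a slightly wider annulus which is identically one on $\mathrm{supp}\,\varphi^{\varepsilon}_{j}$. Because the parameter $J$ in the definition of the partition of unity (cf.~\eqref{eq:p1}) is chosen so that $\mathrm{supp}\,\varphi^{\varepsilon}_{j}$ stays away from the boundary of the Brillouin zone, one has $|g_{\varepsilon}(k)|\sim |k|\sim 2^{j}$ on $\mathrm{supp}\,\tilde\varphi^{\varepsilon}_{j}$, uniformly in $\varepsilon$. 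Writing
\[
 \varphi^{\varepsilon}_{j}(k)=\sum_{l=1}^{d}\frac{\tilde\varphi^{\varepsilon}_{j}(k)\,\overline{g_{\varepsilon,l}(k)}}{|g_{\varepsilon}(k)|^{2}}\,\varphi^{\varepsilon}_{j}(k)\,g_{\varepsilon,l}(k)
\]
gives the identity $\Delta^{\varepsilon}_{j}f=\sum_{l}\mathcal{K}^{\varepsilon}_{j,l}\ast_{\varepsilon}\Delta^{\varepsilon}_{j}\partial_{\varepsilon,l}f$, where $\mathcal{K}^{\varepsilon}_{j,l}=\mathcal{F}^{-1}\bigl(\tilde\varphi^{\varepsilon}_{j}\,\overline{g_{\varepsilon,l}}/|g_{\varepsilon}|^{2}\bigr)$. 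Using admissibility of the polynomial weight, namely $\rho(x)/\rho(y)\lesssim \rho^{-1}(x-y)$, a weighted Young inequality reduces the block estimate to the kernel bound
\[
 \|\mathcal{K}^{\varepsilon}_{j,l}\|_{L^{1,\varepsilon}(\rho^{-1})}\lesssim 2^{-j},
 \qquad 0\le j\le N-J,
\]
uniformly in $\varepsilon$.

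The kernel estimate is the heart of the argument. After the rescaling $k\mapsto 2^{j}k$ the symbol $\tilde\varphi^{\varepsilon}_{j}\,\overline{g_{\varepsilon,l}}/|g_{\varepsilon}|^{2}$ becomes a $2^{-j}$-multiple of a family of smooth functions on a fixed annulus with derivatives bounded uniformly in $\varepsilon$ (this is where $|g_{\varepsilon}|\sim |k|$ is essential). A standard integration-by-parts argument then yields pointwise decay of the form $|\mathcal{K}^{\varepsilon}_{j,l}(x)|\lesssim 2^{-j}2^{jd}(1+2^{j}|x|)^{-M}$ for any $M$, uniformly in $\varepsilon$. For a polynomial weight $\rho(x)=\langle hx\rangle^{-\nu}$ one has $\rho^{-1}(y/2^{j})\leq \rho^{-1}(y)$ for $j\ge 0$, so the rescaled $L^{1}(\rho^{-1})$ norm of the kernel is uniformly bounded, giving the required $2^{-j}$ factor.

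The main obstacle is therefore not conceptual but technical: the bound is just the classical Bernstein inequality for Littlewood--Paley blocks, but one must carefully track uniformity in $\varepsilon$. This uniformity hinges on the fact that $J$ in \eqref{eq:p1} is taken large enough that $\mathrm{supp}\,\tilde\varphi^{\varepsilon}_{j}\subset [-\varepsilon^{-1}/2,\varepsilon^{-1}/2)^{d}$ with a definite margin, so that $|g_{\varepsilon}(k)|$ behaves like $|k|$ on the support; any analysis in the weighted discrete setting such as \cite{MP17} (in particular Lemma~A.2 there) can be invoked to justify the scaling behaviour of the discrete inverse Fourier transform and the uniform $L^{1}$ bound for the kernel.
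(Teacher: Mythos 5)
Your proof is correct and takes essentially the same route as the paper: reduce to a block-wise Bernstein estimate, realize $\Delta^\varepsilon_j f$ as a discrete Fourier multiplier acting on $\Delta^\varepsilon_j\nabla_\varepsilon f$, and prove the kernel bound $\|\mathcal{K}^\varepsilon_{j,l}\|_{L^{1,\varepsilon}(\rho^{-1})}\lesssim 2^{-j}$ uniformly in $\varepsilon$ by non-stationary phase, then close with weighted Young. The only difference is the choice of multiplier: you divide directly by the discrete gradient symbol $|g_\varepsilon(k)|^2$ on an annular enlargement $\tilde\varphi^\varepsilon_j$ (which requires $j\ge 0$ so the block avoids $k=0$), giving the clean block estimate $2^j\|\Delta^\varepsilon_j f\|_{L^{p,\varepsilon}(\rho)}\lesssim\|\Delta^\varepsilon_j\nabla_\varepsilon f\|_{L^{p,\varepsilon}(\rho)}$ with no lower-order term, whereas the paper factors through the globally non-singular resolvent $(\mathrm{Id}-\Delta_\varepsilon)^{-1}(\mathrm{Id}-\Delta_\varepsilon)$ and then splits $(\mathrm{Id}-\Delta_\varepsilon)=\mathrm{Id}+\nabla^*_\varepsilon\nabla_\varepsilon$, producing the slightly weaker but equally sufficient bound $\|\Delta^\varepsilon_j f\|_{L^{p,\varepsilon}(\rho)}\lesssim 2^{-2j}\|\Delta^\varepsilon_j f\|_{L^{p,\varepsilon}(\rho)}+2^{-j}\|\Delta^\varepsilon_j\nabla_\varepsilon f\|_{L^{p,\varepsilon}(\rho)}$. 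Both versions sum up to the stated lemma, and the uniformity-in-$\varepsilon$ of the kernel estimates is secured by the same scaling observation ($|g_\varepsilon(k)|\sim|k|$ on each block, using that the partition of unity stays inside the Brillouin zone).
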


\begin{proof}
  Let $j \geqslant 0$. Let $K_j = K_{j, \varepsilon} = \mathcal{F}^{- 1}
  \varphi^{\varepsilon}_j$ and denote $\bar{K}_j = \bar{K}_{j, \varepsilon} =
  \sum_{i \sim j} K_{i, \varepsilon}$. Then 
  $\Delta^{\varepsilon}_j f = \bar{K}_j \ast \Delta^{\varepsilon}_j f$ and we
  write
  \[ \bar{K}_j \ast \Delta^{\varepsilon}_j f = (\tmop{Id} -
     \Delta_{\varepsilon})^{- 1} (\tmop{Id} - \Delta_{\varepsilon}) (\bar{K}_j
     \ast \Delta^{\varepsilon}_j f) \]
  \begin{equation}
    = (\tmop{Id} - \Delta_{\varepsilon})^{- 1} (\bar{K}_j \ast
    \Delta^{\varepsilon}_j f) - (\tmop{Id} - \Delta_{\varepsilon})^{- 1}
    \nabla^{\ast}_{\varepsilon} \nabla_{\varepsilon} (\bar{K}_j \ast
    \Delta^{\varepsilon}_j f) . \label{eq:16}
  \end{equation}
  For the second term we use translation invariance of
  $\nabla_{\varepsilon}$ to obtain
  \[ (\tmop{Id} - \Delta_{\varepsilon})^{- 1} \nabla^{\ast}_{\varepsilon}
     \nabla_{\varepsilon} (\bar{K}_j \ast \Delta^{\varepsilon}_j f) =
     ((\tmop{Id} - \Delta_{\varepsilon})^{- 1} \nabla^{\ast}_{\varepsilon}
     \bar{K}_j) \ast (\Delta^{\varepsilon}_j \nabla_{\varepsilon} f) , \]
  hence by Young inequality
  \[ \| ((\tmop{Id} - \Delta_{\varepsilon})^{- 1} \nabla^{\ast}_{\varepsilon}
     \bar{K}_j) \ast (\Delta^{\varepsilon}_j \nabla_{\varepsilon} f) \|_{L^{p,
     \varepsilon} (\rho)} \lesssim \| (\tmop{Id} - \Delta_{\varepsilon})^{- 1}
     \nabla^{\ast}_{\varepsilon} \bar{K}_j \|_{L^{1, \varepsilon} (\rho^{-
     1})} \| \Delta^{\varepsilon}_j \nabla_{\varepsilon} f \|_{L^{p,
     \varepsilon} (\rho)} \]
  The kernel $\mathcal{V}_{j, \ell} \assign (\tmop{Id} -
  \Delta_{\varepsilon})^{- 1} \nabla^{\ast}_{\varepsilon, \ell} \bar{K}_j$ is
  given by
  \[ \mathcal{V}_{j, \ell} (k) = \int_{\hat{\Lambda}_{\varepsilon}} e^{2 \pi
     ik \cdot x} \frac{\varepsilon^{- 1}  (1 - e^{- 2 \pi i \varepsilon
     x_{\ell} })}{1 + 2 \sum_{p = 1}^d \varepsilon^{- 2} \sin^2 (\pi i
     \varepsilon x_p)} \bar{\varphi}^{\varepsilon}_j (x) \mathd x \]
  where $\bar{\varphi}_j^{\varepsilon} = \sum_{i \sim j}
  \varphi^{\varepsilon}_i$. Now using $(1 - 2^{2 j} \Delta_x)^M e^{2 \pi ik
  \cdot x} = (1 + 2^{2 j} | 2 \pi k |^2)^M e^{2 \pi ik \cdot x}$ and
  integrating by parts $(1 - \Delta_x)^M$ we have
  \[ | (1 + 2^{2 j} | 2 \pi k |^2)^M \mathcal{V}_{j, \ell} (k) | \leqslant
     \int_{\hat{\Lambda}_{\varepsilon}} \left| (1 - 2^{2 j} \Delta_x)^M \left[
     \frac{\varepsilon^{- 1}  (1 - e^{- 2 \pi i \varepsilon x_{\ell} })}{1 + 2
     \sum_{p = 1}^d \varepsilon^{- 2} \sin^2 (\pi i \varepsilon x_p)}
     \bar{\varphi}^{\varepsilon}_j (x) \right] \right| \mathd x \]
  and it is possible to check that (using that $\varepsilon 2^j \lesssim 1$)
  \[ \left| (1 - 2^{2 j} \Delta_x)^M \left[ \frac{\varepsilon^{- 1}  (1 - e^{-
     2 \pi i \varepsilon x_{\ell} })}{1 + 2 \sum_{p = 1}^d \varepsilon^{- 2}
     \sin^2 (\pi i \varepsilon x_p)} \bar{\varphi}^{\varepsilon}_j (x) \right]
     \right| \lesssim 2^{- j} \mathbb{I}_{2^j \tilde{\mathcal{A}}} \]
  uniformly in $j$ where $\tilde{\mathcal{A}}$ is an annulus centered at the
  origin. Therefore
  \[ | \mathcal{V}_{j, \ell} (k) | \lesssim 2^{- j} 2^{d j} (1 + 2^{2 j} | 2
     \pi k |^2)^{- M} \]
  and from this is easy to deduce that $\| \mathcal{V}_{j, \ell} \|_{L^{1,
  \varepsilon} (\rho^{- 1})} \lesssim 2^{- j}$ uniformly in $j$ and
  $\varepsilon$.
  
  A similar computation applies to the first term in {\eqref{eq:16}} to obtain
  \[ \| (\tmop{Id} - \Delta_{\varepsilon})^{- 1} (\bar{K}_j \ast
     \Delta^{\varepsilon}_j f) \|_{L^{p, \varepsilon} (\rho)} \lesssim \|
     (\tmop{Id} - \Delta_{\varepsilon})^{- 1} \bar{K}_j \|_{L^{1, \varepsilon}
     (\rho^{- 1})} \| \Delta^{\varepsilon}_j f \|_{L^{p, \varepsilon} (\rho)}
     \lesssim 2^{- 2 j} \| \Delta^{\varepsilon}_j f \|_{L^{p, \varepsilon}
     (\rho)} \]
  and the proof is complete.
\end{proof}

\begin{lemma}
  \label{lem:15}Let $\varepsilon \in \mathcal{A}$ and let $\iota > 0$. Let
  $\rho$ be a weight such that $\rho^{\iota} \in L^{4, 0}$. Then
  \[ \| \rho^{1 + \iota} f \|_{L^{2, \varepsilon}} \lesssim \| \rho f
     \|_{L^{4, \varepsilon}}, \]
  where the proportionality constant does not depend on $\varepsilon$.
\end{lemma}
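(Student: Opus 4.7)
The plan is to decompose $\rho^{1+\iota} f = \rho^{\iota} \cdot (\rho f)$ and apply the discrete Cauchy--Schwarz inequality in the form $\|gh\|_{L^{2,\varepsilon}} \leqslant \|g\|_{L^{4,\varepsilon}} \|h\|_{L^{4,\varepsilon}}$. This yields
\[
\| \rho^{1+\iota} f \|_{L^{2,\varepsilon}} \leqslant \| \rho^{\iota} \|_{L^{4,\varepsilon}} \, \| \rho f \|_{L^{4,\varepsilon}},
\]
so the lemma reduces to showing that $\| \rho^{\iota} \|_{L^{4,\varepsilon}}$ is bounded uniformly in $\varepsilon \in \mathcal{A}$ by a multiple of $\| \rho^{\iota} \|_{L^{4,0}}$.

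For this uniform bound, I would exploit that $\rho$ is the polynomial weight $\rho(x) = \langle h x \rangle^{-\nu}$ introduced in \eqref{eq:weight}, which is smooth, positive, and radially decreasing. For any $\varepsilon \in \mathcal{A} \setminus \{0\}$ the discrete norm is a Riemann sum
\[
\| \rho^{\iota} \|_{L^{4,\varepsilon}}^{4} = \varepsilon^{d} \sum_{x \in \Lambda_{\varepsilon}} \rho(x)^{4\iota},
\]
which I would compare to the integral $\int_{\mathbb{R}^d} \rho(y)^{4\iota}\, \mathd y = \| \rho^{\iota} \|_{L^{4,0}}^{4}$ by the standard device of writing $\rho(x)^{4\iota} \leqslant C \varepsilon^{-d} \int_{x + [-\varepsilon/2,\varepsilon/2]^d} \rho(y)^{4\iota}\, \mathd y$, valid uniformly in $\varepsilon \leqslant 1$ because of the slow polynomial variation of $\rho$; summing over $x$ yields the desired comparison with a constant independent of $\varepsilon$.

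I expect no genuine obstacle: the only mild point is justifying the Riemann sum comparison uniformly in $\varepsilon$, which is immediate since polynomial weights are bi-Lipschitz on cubes of size $\varepsilon \leqslant 1$ up to a constant. Combining the two steps delivers the claimed inequality with proportionality constant depending only on $\nu, h, \iota, d$ and in particular independent of $\varepsilon$.
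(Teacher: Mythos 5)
Your proof is correct and follows essentially the same route as the paper: H\"older (Cauchy--Schwarz) to split off $\|\rho^\iota\|_{L^{4,\varepsilon}}$, then comparison of the discrete $L^4$ norm with the continuous one using that the polynomial weight varies slowly on cubes of side $\varepsilon$. The paper cites Lemma A.3 of \cite{MP17} for the latter step, whereas you spell out the Riemann-sum estimate directly; that is the only difference.
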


\begin{proof}
  By H{\"o}lder's inequality
  \[ \| \rho^{1 + \iota} f \|_{L^{2, \varepsilon}} \leqslant \| \rho^{\iota}
     \|_{L^{4, \varepsilon}} \| \rho f \|_{L^{4, \varepsilon}}, \]
  and since for $| x - y | \leqslant 1$ the quotient $\frac{\rho (x)}{\rho
  (y)}$ is uniformly bounded above and below, it follows from Lemma A.2
  {\cite{MP17}} that
  \[ \| \rho^{\iota} \|_{L^{4, \varepsilon}}^4 = \varepsilon^d \sum_{x \in
     \Lambda_{\varepsilon}} \rho^{4 \iota} (x) \lesssim \int_{\mathbb{R}^d}
     \rho^{4 \iota} (x) \mathd x < \infty, \]
  where the proportional constant only depends on $\rho$.
\end{proof}

\begin{lemma}
  \label{lem:mult}Let $\alpha > 0$. Let $\rho_1, \rho_2$ be weights. Then for
  every $\beta > 0$
  \[ \| f^2 \|_{B^{\alpha, \varepsilon}_{1, 1} (\rho_1 \rho_2)} \lesssim \| f
     \|_{L^{2, \varepsilon} (\rho_1)} \| f \|_{H^{\alpha + 2 \beta,
     \varepsilon} (\rho_2)}, \]
  \[ \| f^3 \|_{B^{\alpha, \varepsilon}_{1, 1} (\rho_1^2 \rho_2)} \lesssim \|
     f \|_{L^{4, \varepsilon} (\rho_1)}^2 \| f \|_{H^{\alpha + 2 \beta,
     \varepsilon} (\rho_2)}, \]
  where the proportionality constants do not depend on $\varepsilon$.
\end{lemma}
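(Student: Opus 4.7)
The plan is to prove both estimates via a Bony paraproduct decomposition, followed by a weighted Hölder inequality for each dyadic block and a Cauchy--Schwarz step in the Littlewood--Paley index $k$ that exploits the excess regularity $\beta>0$. Throughout I rely on the admissibility of polynomial weights (i.e.\ $\rho(x)/\rho(y)\lesssim\rho^{-1}(x-y)$) and on the $\varepsilon$-uniform paraproduct and Bernstein-type estimates from \cite{MP17,MW17}, together with Lemma~\ref{lem:equiv2} to pull weights through Littlewood--Paley blocks when convenient.

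For the quadratic bound, I decompose $f^2 = 2\,f\prec f + f\circ f$, exploiting the symmetry $f\succ f = f\prec f$. For $f\prec f$, the frequency support gives $\Delta_k(f\prec f)\approx\sum_{|j-k|\leq 1}S_{j-2}f\cdot\Delta_j f$, so the weighted Hölder inequality $L^{2,\varepsilon}(\rho_1)\cdot L^{2,\varepsilon}(\rho_2)\hookrightarrow L^{1,\varepsilon}(\rho_1\rho_2)$ yields
\[
\|\Delta_k(f\prec f)\|_{L^{1,\varepsilon}(\rho_1\rho_2)}\lesssim \|f\|_{L^{2,\varepsilon}(\rho_1)}\,\|\Delta_k f\|_{L^{2,\varepsilon}(\rho_2)}.
\]
Cauchy--Schwarz in $k$ with the summable factor $\sum_k 2^{-4\beta k}<\infty$ (this is precisely where $\beta>0$ is used) produces the target bound $\|f\|_{L^2(\rho_1)}\|f\|_{H^{\alpha+2\beta}(\rho_2)}$. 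For the resonant term $f\circ f = \sum_{i\sim j}\Delta_i f\,\Delta_j f$ the $\Delta_k$-block vanishes unless $i\sim j\gtrsim k$; the same Hölder inequality followed by an interchange of the $k$ and $i$ summations reduces to the same Cauchy--Schwarz estimate.

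For the cubic bound I expand $f^3 = \sum_{i,j,\ell}\Delta_i f\,\Delta_j f\,\Delta_\ell f$ and split the sum according to the ordering of $(i,j,\ell)$. Up to permutation, the product has frequency support $\lesssim 2^{\max(i,j,\ell)}$, so $\Delta_k$ of each term is zero unless $\max(i,j,\ell)\gtrsim k$. By the symmetry of the three copies of $f$, I may always route the high-frequency role to the factor carrying the $L^2$-based regularity norm and the two low-frequency roles to factors bounded by partial sums $\|S_{\ell}f\|_{L^{4,\varepsilon}(\rho_1)}\lesssim\|f\|_{L^{4,\varepsilon}(\rho_1)}$. Applying the weighted triple Hölder inequality $(L^{4,\varepsilon}(\rho_1),L^{4,\varepsilon}(\rho_1),L^{2,\varepsilon}(\rho_2))\hookrightarrow L^{1,\varepsilon}(\rho_1^2\rho_2)$ and then the same Cauchy--Schwarz in $k$ yields the claimed bound.

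The main technical point is that a naive application of the quadratic estimate to $f^3 = f^2\cdot f$ is insufficient: the high-frequency paraproduct $f^2\succ f$ would require regularity control of $f^2$, which is not directly available from $\|f\|_{L^4(\rho_1)}$ alone. Working directly with the triple dyadic sum and exploiting the identity of the three factors circumvents this obstacle, since the dyadic derivative $2^{\alpha k}$ can be placed on whichever factor currently sits at the maximum frequency. Uniformity in $\varepsilon$ is automatic from the discrete Littlewood--Paley calculus of \cite{MP17}; the polynomial weights enter only through their admissibility and the multiplicative splittings $\rho_1\rho_2$ and $\rho_1^2\rho_2$, both of which preserve the required product estimates.
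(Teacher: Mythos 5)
Your argument is correct, but it is organized differently from the paper's proof. For the quadratic bound, both approaches boil down to the same paraproduct content, but the paper presents it as a two-line chain of abstract product estimates and embeddings
\[
\| f^2 \|_{B^{\alpha,\varepsilon}_{1,1}(\rho_1\rho_2)}
\lesssim \|f\|_{B^{-\beta,\varepsilon}_{2,\infty}(\rho_1)}\,\|f\|_{B^{\alpha+\beta,\varepsilon}_{2,1}(\rho_2)}
\lesssim \|f\|_{L^{2,\varepsilon}(\rho_1)}\,\|f\|_{H^{\alpha+2\beta,\varepsilon}(\rho_2)},
\]
whereas you re-derive the relevant paraproduct bound at the level of dyadic blocks; the Cauchy--Schwarz-in-$k$ you perform is exactly what the embedding $B^{-\beta}_{2,2}\cap B^{\alpha+2\beta}_{2,2}\hookrightarrow B^{-\beta}_{2,\infty}\times B^{\alpha+\beta}_{2,1}$ packages.

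The real difference is in the cubic bound. You discard the factorization $f^3=f\cdot f^2$, claiming it is insufficient, and instead work with a flat triple dyadic sum $\sum_{i,j,\ell}\Delta_i f\,\Delta_j f\,\Delta_\ell f$, routing the $H^{\alpha+2\beta}(\rho_2)$ norm to whichever factor sits at the maximal frequency. This is a valid and self-contained argument (modulo the resonant bookkeeping in the case $i\sim j\sim\ell$, which you handle by the same interchange-of-summation step as in the quadratic case). However, your stated obstacle for the $f\cdot f^2$ route is a misdiagnosis. The paper does use $f^3=f\preccurlyeq f^2+f\succ f^2$ (and its mirror), and the apparently problematic term $f\prec f^2$ (equivalently $f^2\succ f$ in your notation) is handled not by extracting regularity of $f^2$ from $\|f\|_{L^4(\rho_1)}$ alone, but by a second nested product estimate that distributes \emph{both} weights,
\[
\|f\prec f^2\|_{B^{\alpha,\varepsilon}_{1,1}(\rho_1^2\rho_2)}
\lesssim \|f\|_{B^{-\beta,\varepsilon}_{4,\infty}(\rho_1)}\,\|f^2\|_{B^{\alpha+\beta,\varepsilon}_{4/3,1}(\rho_1\rho_2)},
\qquad
\|f^2\|_{B^{\alpha+\beta,\varepsilon}_{4/3,1}(\rho_1\rho_2)}\lesssim\|f\|_{L^{4,\varepsilon}(\rho_1)}\,\|f\|_{H^{\alpha+2\beta,\varepsilon}(\rho_2)}.
\]
So the two approaches give the same estimate; the paper's iterated-Bony version is more modular (it reuses the standard weighted paraproduct estimates twice), while your flat triple sum is more elementary but carries heavier combinatorics. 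Both are acceptable. The only correction to your discussion is that the factorization $f^3=f\cdot f^2$ is not a dead end — it is precisely what the paper does — provided the $f^2$ factor is estimated in a weighted $B^{\alpha+\beta}_{4/3,1}$ space using a second product estimate rather than in an unweighted Sobolev space coming from $L^4$.
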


\begin{proof}
  Due to the paraproduct estimates and the embeddings of Besov spaces, we have
  for every $\beta > 0$
  \[ \| f^2 \|_{B^{\alpha, \varepsilon}_{1, 1} (\rho_1 \rho_2)} \lesssim \| f
     \|_{B_{2, \infty}^{- \beta, \varepsilon} (\rho_1)} \| f \|_{B_{2,
     1}^{\alpha + \beta, \varepsilon} (\rho_2)} \lesssim \| f \|_{B_{2, 2}^{-
     \beta, \varepsilon} (\rho_1)} \| f \|_{B_{2, 2}^{\alpha + 2 \beta,
     \varepsilon} (\rho_2)} \]
  \[ \lesssim \| f \|_{L^{2, \varepsilon} (\rho_1)} \| f \|_{H^{\alpha + 2
     \beta, \varepsilon} (\rho_2)} . \]
  For the cubic term, we write
  \[ \| f^3 \|_{B^{\alpha, \varepsilon}_{1, 1} (\rho_1^2 \rho_2)} \lesssim \|
     f \prec f^2 \|_{B^{\alpha, \varepsilon}_{1, 1} (\rho_1^2 \rho_2)} + \| f
     \succ f^2 \|_{B^{\alpha, \varepsilon}_{1, 1} (\rho_1^2 \rho_2)} + \| f
     \circ f^2 \|_{B^{\alpha, \varepsilon}_{1, 1} (\rho_1^2 \rho_2)} \]
  and estimate each term separately. The second and the third term can be
  estimated directly by
  \[ \| f \succ f^2 \|_{B^{\alpha, \varepsilon}_{1, 1} (\rho_1^2 \rho_2)} + \|
     f \circ f^2 \|_{B^{\alpha, \varepsilon}_{1, 1} (\rho_1^2 \rho_2)}
     \lesssim \| f^2 \|_{B^{- \beta, \varepsilon}_{2, \infty} (\rho_1^2)} \| f
     \|_{B^{\alpha + \beta, \varepsilon}_{2, 1} (\rho_2)} \]
  \[ \lesssim \| f^2 \|_{B^{- \beta, \varepsilon}_{2, 2} (\rho_1^2)} \| f
     \|_{B^{\alpha + 2 \beta, \varepsilon}_{2, 2} (\rho_2)} \lesssim \| f
     \|_{L^{4, \varepsilon} (\rho_1)}^2 \| f \|_{H^{\alpha + 2 \beta,
     \varepsilon} (\rho_2)} . \]
  For the remaining term, we have
  \[ \| f \prec f^2 \|_{B^{\alpha, \varepsilon}_{1, 1} (\rho_1^2 \rho_2)}
     \lesssim \| f \|_{B^{- \beta, \varepsilon}_{4, \infty} (\rho_1)} \| f^2
     \|_{B^{\alpha + \beta, \varepsilon}_{4 / 3, 1} (\rho_1 \rho_2)} \]
  where by the paraproduct estimates and Lemma~\ref{lem:emb}
  \[ \| f^2 \|_{B^{\alpha + \beta, \varepsilon}_{4 / 3, 1} (\rho_1 \rho_2)}
     \lesssim \| f \|_{B^{- \beta, \varepsilon}_{4, \infty} (\rho_1)} \| f
     \|_{B^{\alpha + 2 \beta, \varepsilon}_{2, 1} (\rho_2)} \lesssim \| f
     \|_{L^{4, \varepsilon} (\rho_1)} \| f \|_{H^{\alpha + 2 \beta,
     \varepsilon} (\rho_2)} \]
  which completes the proof.
\end{proof}

\begin{lemma}
  \label{lem:young}Let $\rho$ be a polynomial weight. Let $p, q, r \in [1,
  \infty]$ be such that $\frac{1}{r} + 1 = \frac{1}{p} + \frac{1}{q}$. Then
  \[ \| f \ast_{\varepsilon} g \|_{L^{r, \varepsilon} (\rho)} \lesssim \| f \|_{L^{p,
     \varepsilon} (\rho^{- 1})} \| g \|_{L^{q, \varepsilon} (\rho)}, \]
  \[ \| f \ast_{\varepsilon} g \|_{L^{r, 0} (\rho)} \lesssim \sup_{y \in \mathbb{R}^d} \|
     (\rho^{- 1} f) (y - \cdummy) \|_{L^{p, \varepsilon}}^{\frac{r - p}{r}} \|
     f \|^{\frac{p}{r}}_{L^{p, 0} (\rho^{- 1})} \| g \|_{L^{q,
     \varepsilon} (\rho)}, \]
     where  $\ast_{\varepsilon}$  denotes the convolution on 
$\Lambda_{\varepsilon}$ and the proportionality constants are independent of $\varepsilon$.
\end{lemma}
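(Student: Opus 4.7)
The plan is to reduce both inequalities to an unweighted Young-type estimate by using the admissibility property of polynomial weights, namely $\rho(x) \lesssim \rho^{-1}(x-y)\rho(y)$, and then to prove the unweighted version by the standard three-exponent Hölder trick.

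First, for the weighted bound, I would note that $|(f*_\varepsilon g)(x)| \leq (|f|*_\varepsilon |g|)(x)$, so without loss of generality $f, g \geq 0$. Multiplying by $\rho(x)$ and inserting $\rho(x) \lesssim \rho^{-1}(x-y)\rho(y)$ inside the convolution gives, pointwise,
\begin{equation*}
 \rho(x)(f *_\varepsilon g)(x) \lesssim \bigl((\rho^{-1}f) *_\varepsilon (\rho g)\bigr)(x).
\end{equation*}
Hence both inequalities reduce to a statement about $F = \rho^{-1}f \geq 0$ and $G = \rho g \geq 0$ with no weights.

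For the first inequality this is just the standard discrete Young's inequality $\|F *_\varepsilon G\|_{L^{r,\varepsilon}} \lesssim \|F\|_{L^{p,\varepsilon}}\|G\|_{L^{q,\varepsilon}}$, which holds with a constant independent of $\varepsilon$ because the discrete counting measure $\varepsilon^d\sum_x$ has the same structure as Lebesgue measure, and the classical proof goes through verbatim. For the second inequality, the convolution is mixed — the sum runs over $y \in \Lambda_\varepsilon$ but $x \in \mathbb{R}^d$ — so one needs a hybrid version. Writing $s := (r-p)/r$ and $t := (r-q)/r$ and checking $s/p + t/q + 1/r = 1/p + 1/q - 1/r = 1$, I would apply Hölder's inequality on $\Lambda_\varepsilon$ with exponents $(p/s, q/t, r)$ to the decomposition
\begin{equation*}
 F(x-y)G(y) = F(x-y)^{1-p/r} \cdot G(y)^{1-q/r} \cdot \bigl(F(x-y)^p G(y)^q\bigr)^{1/r},
\end{equation*}
obtaining
\begin{equation*}
 (F *_\varepsilon G)(x) \leq \|F(x-\cdot)\|_{L^{p,\varepsilon}}^{(r-p)/r}\, \|G\|_{L^{q,\varepsilon}}^{(r-q)/r}\, \bigl((F^p *_\varepsilon G^q)(x)\bigr)^{1/r}.
\end{equation*}
Bounding the first factor uniformly in $x$ by $\sup_{y \in \mathbb{R}^d}\|F(y-\cdot)\|_{L^{p,\varepsilon}}^{(r-p)/r}$ and taking the $L^{r,0}$ norm, Fubini on the mixed integral-sum gives
\begin{equation*}
 \|F^p *_\varepsilon G^q\|_{L^{1,0}} = \varepsilon^d \sum_{y \in \Lambda_\varepsilon} G(y)^q \int_{\mathbb{R}^d} F(x-y)^p\,\mathrm{d}x = \|F\|_{L^{p,0}}^p \|G\|_{L^{q,\varepsilon}}^q,
\end{equation*}
and extracting the $r$-th root yields the claim.

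The only genuinely nontrivial point is the uniformity of the implicit constant in $\varepsilon$; this is immediate for the Hölder step since the constants there are combinatorial, and for the admissibility step it follows from the explicit form $\rho(x) = \langle hx \rangle^{-\nu}$, which satisfies $\rho(x)/\rho(y) \lesssim \rho^{-1}(x-y)$ with a proportionality constant depending only on $\nu$ and independent of $\varepsilon$. I do not anticipate any real obstacle; the mixed discrete/continuous Fubini in the second bound is the single place where a little care is required, but the positivity of the integrand makes it automatic.
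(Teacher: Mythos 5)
Your proof is correct and follows essentially the same route as the paper's: reduce to the unweighted case via the admissibility bound $\rho(x)\lesssim\rho^{-1}(x-y)\rho(y)$, handle the first bound by ordinary Young's inequality, and handle the mixed bound by the same three-exponent Hölder split (your exponents $(p/s,q/t,r)$ are exactly the paper's $(rp/(r-p),\, rq/(r-q),\, r)$), followed by taking $r$-th powers and a Fubini swap of $\int_{\mathbb{R}^d}$ with $\varepsilon^d\sum_{\Lambda_\varepsilon}$. You merely spell out the Fubini step a bit more explicitly than the paper does.
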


\begin{proof}
  We observe that for a polynomial weight of the form $\rho (x) = \langle x
  \rangle^{- \nu}$ for some $\nu \geqslant 0$, we have that $\rho (y)
  \lesssim \rho (x) \rho^{- 1} (x - y)$. Accordingly,
  \[ | f \ast g (y) \rho (y) | = \left| \varepsilon^d \sum_{x \in
     \Lambda_{\varepsilon}} f (y - x) g (x) \rho (y) \right| \lesssim
     \varepsilon^d \sum_{x \in \Lambda_{\varepsilon}} | \rho f (y - x) |
     \rho^{- 1} (x - y) | g (x) | \rho (x) \]
  hence the claim follows by (unweighted) Young's inequality. For the second
  bound, we write
  \[ | f \ast g (y) \rho (y) | \lesssim \varepsilon^d \sum_{x \in
     \Lambda_{\varepsilon}} (| (\rho^{- 1} f) (y - x) |^p | (\rho g) (x)
     |^q)^{\frac{1}{r}} | (\rho^{- 1} f) (y - x) |^{\frac{r - p}{r}} | (\rho
     g) (x) |^{\frac{r - q}{r}} \]
  and apply H{\"o}lder's inequality with exponents $r, \frac{r p}{r - p},
  \frac{r q}{r - q}$
  \[ | f \ast g (y) \rho (y) | \lesssim \left( \varepsilon^d \sum_{x \in
     \Lambda_{\varepsilon}} | (\rho^{- 1} f) (y - x) |^p | \rho g (x) |^q
     \right)^{\frac{1}{r}} \| (\rho^{- 1} f) (y - \cdummy) \|_{L^{p,
     \varepsilon}}^{\frac{r - p}{r}} \| \rho g \|_{L^{q,
     \varepsilon}}^{\frac{r - q}{r}} \]
  \[ \leqslant \left( \varepsilon^d \sum_{x \in \Lambda_{\varepsilon}} |
     (\rho^{- 1} f) (y - x) |^p | \rho g (x) |^q \right)^{\frac{1}{r}} \sup_{y
     \in \mathbb{R}^d} \| (\rho^{- 1} f) (y - \cdummy) \|_{L^{p,
     \varepsilon}}^{\frac{r - p}{r}} \| \rho g \|_{L^{q,
     \varepsilon}}^{\frac{r - q}{r}} . \]
  Finally, taking the $r$th power and integrating completes the proof.
\end{proof}

  \subsection{Localizers}
  \label{s:l1}
As the next step, we introduce another equivalent formulation of the weighted
Besov spaces $B^{\alpha, \varepsilon}_{\infty, \infty} (\rho)$ in terms of
suitable point evaluation of the Littlewood--Paley decomposition. First, for
$J \in \mathbb{N}_0$ such that $N - J \leqslant J_{\varepsilon}$, $\alpha \in
\mathbb{R}$ and $\varepsilon \in \mathcal{A}$ we define the Besov space
$b^{\alpha, \varepsilon}_{\infty, \infty} (\rho)$ of sequences $\lambda =
(\lambda_{j, m})_{- 1 \leqslant j \leqslant N - J, m \in \mathbb{Z}^d}$ by the
norm
\[ \| \lambda \|_{b^{\alpha, \varepsilon}_{\infty, \infty} (\rho)} \assign
   \sup_{- 1 \leqslant j \leqslant N - J} 2^{\alpha j} \sup_{m \in
   \mathbb{Z}^d} \rho (2^{- j - J} m) | \lambda_{j, m} | . \]
Note that we do not stress the dependence of $b^{\alpha, \varepsilon}_{\infty,
\infty} (\rho)$ on the parameter $J$ as in the sequel we only consider one
fixed $J$ for all $\varepsilon \in \mathcal{A}$ given by Lemma~\ref{lem:equiv}
below.
The next result shows  the desired equivalence.

\begin{lemma}
  \label{lem:equiv}Let $\alpha \in \mathbb{R}$, $\varepsilon \in \mathcal{A}$
  and let $\rho$ be a weight. There exists $J \in \mathbb{N}_0$ (independent
  of $\varepsilon$) with the following property: $f \in B^{\alpha,
  \varepsilon}_{\infty, \infty} (\rho)$ if and only if it is represented by
  $\lambda = (\lambda_{j, m})_{- 1 \leqslant j \leqslant N - J, m \in
  \mathbb{Z}^d} \in b^{\alpha, \varepsilon}_{\infty, \infty} (\rho)$ such that
  \begin{equation}
    \| f \|_{B_{\infty, \infty}^{\alpha, \varepsilon} (\rho)} \sim \| \lambda
    \|_{b^{\alpha, \varepsilon}_{\infty, \infty} (\rho)}, \label{eq:d3}
  \end{equation}
  where the proportionality constants do not depend on $\varepsilon$. In
  particular, given $f \in B^{\alpha, \varepsilon}_{\infty, \infty} (\rho)$
  the coefficients $\lambda$ are defined by
  \begin{equation}
    \lambda_{j, m} (f) \assign \Delta_j^{\varepsilon} f (2^{- j - J} m),
    \qquad - 1 \leqslant j \leqslant N - J, \hspace{1em} m \in \mathbb{Z}^d,
    \label{eq:d1}
  \end{equation}
  and given $\lambda \in b^{\alpha, \varepsilon}_{\infty, \infty} (\rho)$ the
  distribution $f$ is recovered via the formula
  \begin{equation}
    f = \sum_{- 1 \leqslant j \leqslant N - J} 
    \mathcal{F}^{- 1} (\mathcal{F}_{2^{- j - J} \mathbb{Z}^d} (\lambda_{j,
    \cdot})), \label{eq:d2}
  \end{equation}
  where $\mathcal{F}_{2^{- j - J} \mathbb{Z}^d}$ denotes the Fourier transform
  on the lattice $2^{- j - J} \mathbb{Z}^d$.
\end{lemma}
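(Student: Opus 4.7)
The statement is a discrete/weighted version of the classical sampling (Shannon--Nyquist) characterization of Besov spaces via pointwise evaluation of Littlewood--Paley blocks on a sufficiently fine lattice. My plan is to prove the two inequalities in~\eqref{eq:d3} separately, identifying the role of $J$ via an appropriate reproducing kernel identity.

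\emph{Easy direction ($f\mapsto\lambda(f)$).} Given $f\in B^{\alpha,\varepsilon}_{\infty,\infty}(\rho)$, I define $\lambda_{j,m}(f)$ by~\eqref{eq:d1}. Since $\rho$ is continuous and bounded, pointwise evaluation of the bandlimited function $\Delta_j^{\varepsilon}f$ is well defined, and the inequality
\[
2^{\alpha j}\,\rho(2^{-j-J}m)\,|\Delta_j^{\varepsilon}f(2^{-j-J}m)|\;\leqslant\;2^{\alpha j}\,\|\rho\,\Delta_j^{\varepsilon}f\|_{L^{\infty,\varepsilon}}\;\leqslant\;\|f\|_{B_{\infty,\infty}^{\alpha,\varepsilon}(\rho)}
\]
follows by taking supremum in $j$ and $m$. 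This gives $\|\lambda(f)\|_{b^{\alpha,\varepsilon}_{\infty,\infty}(\rho)}\lesssim\|f\|_{B_{\infty,\infty}^{\alpha,\varepsilon}(\rho)}$ uniformly in $\varepsilon$.

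\emph{Choice of $J$ and the reproducing kernel.} Let $(\varphi_j)_{j\geqslant-1}$ be the original dyadic partition of unity and let $\tilde\varphi_j:=\varphi_{j-1}+\varphi_j+\varphi_{j+1}$ (with obvious modifications at $j=-1$ and at $j=N-J$, where I use the periodic truncation~\eqref{eq:p1}). I will fix $J\in\mathbb N_0$ large enough so that, on the support of $\varphi_j^{\varepsilon}$, the Fourier modes of scale $\lesssim 2^j$ are strictly contained in the fundamental domain $[-2^{j+J-1},2^{j+J-1}]^d$ of the dual lattice to $2^{-j-J}\mathbb Z^d$; together with the requirement $N-J\leqslant J_\varepsilon$ from Section~\ref{s:not}, this fixes a universal $J$ valid for all $\varepsilon\in\mathcal A$. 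Under this condition, the Poisson summation formula yields the identity
\[
\Delta_j^{\varepsilon}f(x)\;=\;(2^{-j-J})^d\sum_{m\in\mathbb Z^d}\Delta_j^{\varepsilon}f(2^{-j-J}m)\,K_j(x-2^{-j-J}m),
\]
where $K_j:=\mathcal F^{-1}(\tilde\varphi_j^{\varepsilon})$ is a reconstruction kernel with the scaling $K_j(y)=2^{jd}K(2^j y)$ (suitably modified at $j=N-J$ to account for the lattice truncation) and $K$ a Schwartz function. Thus $K_j$ satisfies, uniformly in $j$ and $\varepsilon$,
\[
|K_j(y)|\;\lesssim\;2^{jd}\,(1+2^{j}|y|)^{-M}\qquad\text{for any }M>0.
\]

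\emph{Hard direction ($\lambda\mapsto f$).} Given $\lambda\in b^{\alpha,\varepsilon}_{\infty,\infty}(\rho)$, I define $f$ through~\eqref{eq:d2}, which by the above reproducing identity is nothing but
\[
f(x)\;=\;\sum_{-1\leqslant j\leqslant N-J}f_j(x),\qquad f_j(x):=(2^{-j-J})^d\sum_{m\in\mathbb Z^d}\lambda_{j,m}K_j(x-2^{-j-J}m).
\]
Each $f_j$ has Fourier support in $\operatorname{supp}\tilde\varphi_j^\varepsilon$, so that $\Delta_i^{\varepsilon}f_j=0$ for $|i-j|\geqslant 2$, and the estimate $\|\Delta_i^{\varepsilon}f\|_{L^{\infty,\varepsilon}(\rho)}\lesssim 2^{-\alpha i}\|\lambda\|_{b^{\alpha,\varepsilon}_{\infty,\infty}(\rho)}$ reduces to bounding $\|f_j\|_{L^{\infty,\varepsilon}(\rho)}$ for each $j$. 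Using admissibility of $\rho$, i.e.\ $\rho(x)\lesssim\rho(2^{-j-J}m)\,\rho^{-1}(x-2^{-j-J}m)$, together with the rapid decay of $K_j$, I bound
\[
\rho(x)|f_j(x)|\lesssim 2^{-\alpha j}\|\lambda\|_{b^{\alpha,\varepsilon}_{\infty,\infty}(\rho)}(2^{-j-J})^d\sum_{m}2^{jd}(1+2^{j}|x-2^{-j-J}m|)^{-M}\rho^{-1}(x-2^{-j-J}m),
\]
and the last sum is a Riemann sum uniformly bounded in $j,\varepsilon,x$ by choosing $M$ large enough (the polynomial weight $\rho^{-1}$ grows at most polynomially). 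This yields the reverse inequality in~\eqref{eq:d3}.

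\emph{Main obstacle.} The only genuinely delicate point is the boundary block $j=N-J$, where $\varphi^{\varepsilon}_{N-J}$ is defined by~\eqref{eq:p1} as a lattice-periodic tail rather than as a dyadic annulus. Here one has to verify that the same reproducing identity and the same kernel estimates remain valid uniformly in $\varepsilon\in\mathcal A$; this determines the lower bound on $J$ through the condition $N-J\leqslant J_\varepsilon=N-\ell$, i.e.\ $J\geqslant \ell$ (and $J$ also has to exceed a universal constant coming from the sampling requirement). Once this compatibility is verified, uniformity of all constants in $\varepsilon$ follows from the fact that the scaling of $K_j$ and of $\tilde\varphi_j^\varepsilon$ is independent of $\varepsilon$ in the range $-1\leqslant j\leqslant N-J$.
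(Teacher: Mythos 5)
Your proposal is correct in substance and takes a genuinely different route from the paper's proof in its hard direction. The paper proves the reverse inequality $\|f\|_{B^{\alpha,\varepsilon}_{\infty,\infty}(\rho)}\lesssim\|\lambda(f)\|_{b^{\alpha,\varepsilon}_{\infty,\infty}(\rho)}$ by an \emph{absorption argument}: for $x$ in the cube of side $2^{-j-J}$ centered at $2^{-j-J}m$, one writes $|\Delta_j^\varepsilon f(x)|\leqslant |\Delta_j^\varepsilon f(x)-\Delta_j^\varepsilon f(2^{-j-J}m)|+|\Delta_j^\varepsilon f(2^{-j-J}m)|$, bounds the first term by $c_2 2^{-J-1}\|\Delta_j^\varepsilon f\|_{L^{\infty,\varepsilon}(\rho)}$ via a discrete Bernstein estimate, and then chooses $J$ large enough so that $c_2 2^{-J-1}<1$ lets one absorb the first term into the left-hand side. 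You instead invoke the Shannon--Nyquist sampling formula through Poisson summation, reproducing $\Delta_j^\varepsilon f$ from its lattice samples against a Schwartz-like kernel $K_j=\mathcal{F}^{-1}(\tilde\varphi_j^\varepsilon)$, and then estimate the resulting Riemann sum using the rapid decay of $K_j$ and the admissibility of $\rho$. Both strategies are standard for such sampling characterizations, and both pin down the same requirement on $J$ (roughly $2^{-J}$ small enough to control the oscillation of a bandlimited block over one sampling cell). The paper's absorption argument is shorter and avoids writing down an explicit reproducing kernel, but requires $f$ to be known a priori to lie in the Besov space; your reconstruction argument is more self-contained for the $\lambda\mapsto f$ direction, since it builds $f$ directly and controls its norm without circularity. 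You also correctly identify the one point that needs extra care: the boundary block $j=N-J$, where $\varphi_{N-J}^\varepsilon$ is a lattice truncation rather than a smooth annulus. The paper handles this by noting that at $j=N-J$ the sampling lattice already coincides with $\Lambda_\varepsilon$, so there is nothing to interpolate; if you follow the kernel route, you should make this explicit rather than assuming the continuum scaling $K_j(y)=2^{jd}K(2^jy)$ persists (it does not exactly at $j=N-J$, though the relevant $L^{1,\varepsilon}(\rho^{-1})$ estimate still holds).
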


 \begin{proof} 
  Let us first discuss the decomposition {\eqref{eq:d2}}. We recall that if $f
  \in \mathcal{S}' (\Lambda_{\varepsilon})$ then $\mathcal{F} f = \sum_{- 1
  \leqslant j \leqslant N - J} \varphi^{\varepsilon}_j \mathcal{F} f$ where
  for $j < N - J$ the function $\varphi^{\varepsilon}_j \mathcal{F} f$ is
  supported in a ball of radius proportional to $2^j$. Let $j < N - J$ and let
  $B_j \subset \mathbb{R}^d$ be a cube centered at the origin with length
  $2^{j + J}$. We choose $J \in \mathbb{N}_0$ such that $\tmop{supp}
  \varphi^{\varepsilon}_j \subset B_j$. Next, we identify $B_j$ with $(2^{j +
  J} \mathbb{T})^d \subset (2^N \mathbb{T})^d$ and regard
  $\varphi^{\varepsilon}_j \mathcal{F} f$ as a periodic function on $(2^{j +
  J} \mathbb{T})^d$. Then using a Fourier series expansion we may write
  \[ (\varphi^{\varepsilon}_j \mathcal{F} f) (z) = 2^{(- j - J) d} \sum_{m
     \in \mathbb{Z}^d} \lambda_{j, m} (f) e^{- 2 \pi i 2^{- j - J} m \cdummy
     z} = \mathcal{F}_{2^{- j - J} \mathbb{Z}^d} (\lambda_{j, \cdot} (f)) (z) \]
  where
  \[ \lambda_{j, m} (f) \assign \int_{B_j} (\varphi^{\varepsilon}_j
     \mathcal{F} f) (y) e^{2 \pi i 2^{- j - J} m \cdummy y} \mathd y =
     \mathcal{F}^{- 1} (\varphi^{\varepsilon}_j \mathcal{F} f) (2^{- j - J} m)
     = \Delta_j^{\varepsilon} f (2^{- j - J} m) . \]
  If $j = N - J$ then by definition of $\varphi^{\varepsilon}_j$ we see that
  $\varphi^{\varepsilon}_j \mathcal{F} f$ is a periodic function on $(2^N
  \mathbb{T})^d$. Hence we obtain the same formula (since $- j - J = - N$)
  \[ \lambda_{j, m} (f) \assign \int_{(2^N \mathbb{T})^d}
     (\varphi^{\varepsilon}_j \mathcal{F} f) (y) e^{2 \pi i 2^{- j - J} m
     \cdummy y} \mathd y = \Delta_j^{\varepsilon} f (2^{- j - J} m) . \]
  Therefore, we have derived the decomposition {\eqref{eq:d2}} with
  coefficients given by {\eqref{eq:d1}}.
  
  It remains to establish the equivalence of norms {\eqref{eq:d3}}. One
  direction is immediate, namely, for every $N - J \leqslant J_{\varepsilon}$
  we have
  \[ \sup_{- 1 \leqslant j \leqslant N - J} 2^{\alpha j} \sup_{m \in
     \mathbb{Z}^d} \rho (2^{- j - J} m) | \lambda_{j, m} (f) | = \sup_{- 1
     \leqslant j \leqslant N - J} 2^{\alpha j} \sup_{m \in \mathbb{Z}^d} \rho
     (2^{- j - J} m) | \Delta^{\varepsilon}_j f (2^{- j - J} m) | \]
  \[ \leqslant \sup_{- 1 \leqslant j \leqslant N - J} 2^{\alpha j} \sup_{x \in
     \Lambda_{\varepsilon}} \rho (x) | \Delta^{\varepsilon}_j f (x) | . \]
  Conversely, if $x \in \Lambda_{\varepsilon}$ belongs to the cube of size
  $2^{- j - J}$ centered at $2^{- j - J} m$, we write
  \begin{equation}
    | \Delta^{\varepsilon}_j f (x) | \leqslant | \Delta^{\varepsilon}_j f (x)
    - \Delta^{\varepsilon}_j f (2^{- j - J} m) | + | \Delta^{\varepsilon}_j f
    (2^{- j - J} m) |, \label{eq:25}
  \end{equation}
  Now we shall multiply the above inequality by $\rho (x)$ and estimate. To
  this end, we recall that due to the admissibility condition for polynomial
  weights there exists $\nu \geqslant 0$ and $c_1 > 0$ (depending only on
  $\rho$) such that
  \[ \frac{\rho (x)}{\rho (z)} \lesssim \big( 1 + \big| \sqrt{d} 2^{- j - J
     - 1} \big|^2 \big)^{\nu / 2} \lesssim c_1 \quad \tmop{whenever} \quad
     | x - z | \leqslant \sqrt{d} 2^{- j - J - 1} . \]
  In addition, to estimate the first term in {\eqref{eq:25}}, we recall that
  for $- 1 \leqslant j < N - J$ the Fourier transform of
  $\Delta^{\varepsilon}_j f$ is supported in a ball of radius proportional to
  $2^j$ hence by a computation similar to Bernstein's lemma (since by our
  construction $| x - 2^{- j - J} m | \leqslant \sqrt{d} 2^{- j - J - 1}$)
  \[ \rho (x) | \Delta^{\varepsilon}_j f (x) - \Delta^{\varepsilon}_j f (2^{-
     j - J} m) | \leqslant c_2 2^{- J - 1} \| \Delta^{\varepsilon}_j f
     \|_{L^{\infty, \varepsilon} (\rho)}, \]
  for some universal constant $c_2 > 0$ independent of $f$ and $\varepsilon$.
  If $j = N - J$ then $\Lambda_{\varepsilon}$ coincides with the lattice $2^{-
  j - J} \mathbb{Z}^d$ and therefore we do not need to do anything.
  Consequently it follows from {\eqref{eq:25}} that
  \[ \| \Delta^{\varepsilon}_j f \|_{L^{\infty, \varepsilon} (\rho)} \leqslant
     c_2 2^{- J - 1} \| \Delta^{\varepsilon}_j f \|_{L^{\infty, \varepsilon}
     (\rho)} + c_1 \sup_{m \in \mathbb{Z}^d} \rho (2^{- j - J} m) |
     \Delta^{\varepsilon}_j f (2^{- j - J} m) | . \]
  Hence, making $J \in \mathbb{N}_0$ possibly larger such that $c_2 2^{- J -
  1} < 1$, we may absorb the first term on the right hand side into the left
  hand side and the claim follows.
  \end{proof}

\begin{remark}
  Throughout the paper, the parameter $J \in \mathbb{N}_0$ is fixed as in Lemma~\ref{lem:equiv}. Consequently, from the condition $0 \leqslant N - J$ we
  obtain the necessary lower bound $N_0$ for $N$, or alternatively the upper
  bound for $\varepsilon = 2^{- N} \leqslant 2^{- N_0}$ and defines the set
  $\mathcal{A}$. These parameters remain fixed for the rest of the paper.
\end{remark}

\begin{remark}
  \label{rem:3}Note that the formulas {\eqref{eq:d1}}, {\eqref{eq:d2}} depend
  on the chosen partition of unity $(\varphi_j)_{j \geqslant - 1}$ and our
  construction of the associated periodic partitions of unity on
  $\hat{\Lambda}_{\varepsilon}$ via $\eqref{eq:p1} .$
\end{remark}

It follows from the previous lemma that we may identify $f \in B^{\alpha,
\varepsilon}_{\infty, \infty} (\rho)$ with its coefficients $(\lambda_{j, m}
(f))_{- 1 \leqslant j \leqslant N - J, m \in \mathbb{Z}^d} \in b^{\alpha,
\varepsilon}_{\infty, \infty} (\rho)$. This consideration leads us to the
definition of localization operators needed for the analysis of the $\Phi^4_3$
model. Although the principle idea is similar to Section 2.3 in {\cite{GH18}},
we present a different definition of the localizers here. It is based on the
equivalent description of the Besov spaces from Lemma~\ref{lem:equiv} and is
better suited for the discrete setting.

Given $(L_k)_{k \geqslant - 1} \subset (0, \infty)$ and $f \in \mathcal{S}'
(\Lambda_{\varepsilon})$ we define
\[ \UU_{>}^{\varepsilon} f \assign \left( \lambda_{j, m} \left(
   \UU^{\varepsilon}_{>} f \right) \right)_{- 1 \leqslant j \leqslant N - J, m
   \in \mathbb{Z}^d}, \qquad \UU_{\leqslant}^{\varepsilon} f \assign \left(
   \lambda_{j, m} \left( \UU_{\leqslant}^{\varepsilon} f \right) \right)_{- 1
   \leqslant j \leqslant N - J, m \in \mathbb{Z}^d} \]
where
\[ \lambda_{j, m} \left( \UU_{>}^{\varepsilon} f \right) \assign \left\{
   \begin{array}{lll}
     \lambda_{j, m} (f), &  & \tmop{if} | m | \sim 2^k \tmop{and} j > L_k \text{ for some } k\in \{-1,0,1,\dots\},\\
     0, &  & \tmop{otherwise},
   \end{array} \right. \]
\[ \lambda_{j, m} \left( \UU_{\leqslant}^{\varepsilon} f \right) \assign
   \left\{ \begin{array}{lll}
     \lambda_{j, m} (f), &  & \tmop{if} | m | \sim 2^k \tmop{and} j \leqslant
     L_k \text{ for some } k\in \{-1,0,1,\dots\},\\
     0, &  & \tmop{otherwise} .
   \end{array} \right. \]
We observe that by definition $f = \UU_{>}^{\varepsilon} f +
\UU_{\leqslant}^{\varepsilon} f$ and the localizers $\UU_{>}^{\varepsilon},
\UU_{\leqslant}^{\varepsilon}$ will only depend on $\varepsilon$ through the
cut-off of the coefficients $\lambda$ (and consequently on the construction of
the partition of unity on $\hat{\Lambda}_{\varepsilon}$, cf. Remark
\ref{rem:3}), whereas the sequence $(L_k)_{k \geqslant - 1}$ will be chosen
uniformly for all $\varepsilon \in \mathcal{A}$.

\begin{lemma}
  \label{lem:loc}Let $\rho$ be a weight. Let $\alpha, \beta, \gamma \in
  \mathbb{R}$ and $a, b, c \in \mathbb{R}$ such that $\alpha < \beta <
  \gamma$, $a < b < c$ and $r \assign (b - a) / (\beta - \alpha) = (c - b) /
  (\gamma - \beta) > 0$. Let $L > 0$ be given. There exists a sequence
  $(L_k)_{k \geqslant - 1}$ defining the above localizers such that
  \[ \left\| \UU^{\varepsilon}_{>} f \right\|_{B^{\alpha,
     \varepsilon}_{\infty, \infty} (\rho^a)} \lesssim 2^{- (\beta - \alpha) L}
     \| f \|_{B^{\beta, \varepsilon}_{\infty, \infty} (\rho^b)}, \]
  \[ \left\| \UU^{\varepsilon}_{\leqslant} f \right\|_{B^{\gamma,
     \varepsilon}_{\infty, \infty} (\rho^c)} \lesssim 2^{(\gamma - \beta) L}
     \| f \|_{B^{\beta, \varepsilon}_{\infty, \infty} (\rho^b)}, \]
  where the proportionality constants do not depend on $\varepsilon \in
  \mathcal{A}$. Moreover, the sequence $(L_k)_{k \geqslant - 1}$ depends only
  on $L, \rho$ and the ratio $r$.
\end{lemma}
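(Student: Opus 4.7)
My plan is to exploit the explicit coefficient characterization from Lemma~\ref{lem:equiv}, which represents each $f\in B^{\alpha,\varepsilon}_{\infty,\infty}(\rho^{a})$ as a sequence $(\lambda_{j,m}(f))$ with $\|f\|_{B^{\alpha,\varepsilon}_{\infty,\infty}(\rho^{a})}\sim\sup_{j,m}2^{\alpha j}\rho^{a}(2^{-j-J}m)|\lambda_{j,m}(f)|$. Under this identification the localizers act trivially on coefficients, so both bounds reduce to scalar inequalities. Applying $|\lambda_{j,m}(f)|\leqslant 2^{-\beta j}\rho^{-b}(2^{-j-J}m)\|f\|_{B^{\beta,\varepsilon}_{\infty,\infty}(\rho^{b})}$ together with the polynomial form of $\rho(x)=\langle hx\rangle^{-\nu}$, one has the pointwise estimates $\rho^{a-b}(2^{-j-J}m)\lesssim 2^{\nu(b-a)\max(0,k-j)}$ and $\rho^{c-b}(2^{-j-J}m)\lesssim 2^{-\nu(c-b)\max(0,k-j)}$ whenever $|m|\sim 2^{k}$, with implicit constants depending only on $\rho$.

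Substituting these estimates, and setting $\mu:=\nu(b-a)$ and $\tilde{\mu}:=\nu(c-b)$, reduces the two claims to requiring, for all $k\geqslant -1$,
\[
\sup_{j>L_{k}} 2^{(\alpha-\beta)j+\mu\max(0,k-j)}\lesssim 2^{-(\beta-\alpha)L},\qquad \sup_{j\leqslant L_{k}} 2^{(\gamma-\beta)j-\tilde{\mu}\max(0,k-j)}\lesssim 2^{(\gamma-\beta)L}.
\]
A case split on whether $j<k$ or $j\geqslant k$ shows that in each supremum the maximum is attained near the endpoint $j=L_{k}$; it yields the lower envelope $L_{k}\geqslant\max\bigl(L,\,\tfrac{\mu k+(\beta-\alpha)L}{\mu+\beta-\alpha}\bigr)$ from the first inequality and the upper envelope $L_{k}\leqslant\min\bigl(L,\,\tfrac{\tilde{\mu}k+(\gamma-\beta)L}{\tilde{\mu}+\gamma-\beta}\bigr)$ from the second, the latter reducing to the second term as soon as $k>L$.

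The crucial point is that the hypothesis $r=(b-a)/(\beta-\alpha)=(c-b)/(\gamma-\beta)$ is exactly what makes these two envelopes \emph{coincide}: writing $\mu=\nu r(\beta-\alpha)$ and $\tilde{\mu}=\nu r(\gamma-\beta)$, both fractions collapse to the common value $(L+\nu rk)/(1+\nu r)$. Hence the choice
\[
L_{k}:=\max\!\Bigl(L,\ \tfrac{L+\nu rk}{1+\nu r}\Bigr),\qquad k\geqslant -1,
\]
satisfies both sets of inequalities with equality, is linear in $k$ with slope $\nu r/(1+\nu r)<1$ (so $L_{k}<k$ for $k>L$), and depends only on $L$, $r$ and $\rho$ as required. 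The main obstacle to keep in mind is precisely isolating this algebraic coincidence: without the ratio condition, the admissible interval for $L_{k}$ at large $k$ would be empty and no such choice could exist. Since all the estimates are pointwise in $(j,m)$, uniformity in $\varepsilon\in\mathcal{A}$ is automatic from the corresponding uniformity in Lemma~\ref{lem:equiv}.
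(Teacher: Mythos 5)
Your proposal is correct, but it follows a genuinely different, more careful route than the paper, and the difference turns out to be substantive. The paper also starts from the coefficient characterization of Lemma~\ref{lem:equiv}, but then uses the coarse bound $\rho^{a-b}(2^{-j-J}m)\lesssim\rho^{a-b}(2^{k})$ for the $\UU^{\varepsilon}_{>}$ estimate (harmless, since $\rho^{a-b}$ is increasing and $|2^{-j-J}m|\lesssim 2^{k}$) and then claims a symmetric bound $\rho^{c-b}(2^{-j-J}m)\lesssim\rho^{c-b}(2^{k})$ for the $\UU^{\varepsilon}_{\leqslant}$ estimate ``by the same arguments''. That second bound, however, runs against the monotonicity: $\rho^{c-b}$ is \emph{decreasing} and $|2^{-j-J}m|\lesssim 2^{k}$, so the inequality actually goes the other way. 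This coarse bound leads the paper to set $L_{k}=L+r\,c_{k}$ with $c_{k}=-\log_{2}\rho(2^{k})\approx\nu k$, i.e.\ a slope $\approx r\nu$ in $k$. One can check directly that this choice fails the $\UU^{\varepsilon}_{\leqslant}$ bound: for large $k$, taking $j$ near $L_{k}\geq k$ and $|m|\sim 2^{k}$ makes $|2^{-j-J}m|\sim 2^{-L-J}$ order one, so $\rho^{c-b}(\cdot)\approx\mathrm{const}$ and the term $2^{(\gamma-\beta)j}\rho^{c-b}(\cdot)$ grows like $2^{(\gamma-\beta)(L+r\nu k)}\to\infty$.

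Your proof fixes this by keeping the $j$-dependence of the weight via the tight bound $\rho^{c-b}(2^{-j-J}m)\lesssim 2^{-\nu(c-b)\max(0,k-j)}$ and performing the case split $j\lessgtr k$. The two resulting envelopes (lower from $\UU_{>}$, upper from $\UU_{\leqslant}$) collapse to a single admissible value precisely because of the ratio hypothesis, yielding $L_{k}=\max\bigl(L,\ (L+\nu r k)/(1+\nu r)\bigr)$, which has slope $\nu r/(1+\nu r)<1$ rather than $r\nu$. This choice satisfies both constraints (in fact with equality in the large-$k$ regime) and makes the ``for some'' in the lemma genuine: the set of admissible $L_{k}$ is a single point at large $k$, which is exactly the role of the ratio condition, as you observe. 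In short, your approach is not merely an alternative route; its sharper pointwise estimate appears to be needed, and the formula $L_{k}=L+r c_{k}$ in the paper should be replaced by the one you obtain.

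One small point to tighten: you should note explicitly that the interval defined by the lower and upper envelopes is nonempty also for $k\leq L$ (both collapse to $L_{k}=L$ there), and that the passage from the pointwise $(j,m)$-estimate to the Besov norm uses the $\sim$ in Lemma~\ref{lem:equiv} with $\varepsilon$-uniform constants, which you already invoke.
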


\begin{proof}
  Since $\alpha < \beta$ and $a < b$,  Lemma~\ref{lem:equiv} yields
  \[ \left\| \UU_{>}^{\varepsilon} f \right\|_{B^{\alpha,
     \varepsilon}_{\infty, \infty} (\rho^a)} \lesssim \sup_{- 1 \leqslant j
     \leqslant N - J} 2^{\alpha j} \sup_{m \in \mathbb{Z}^d} \rho^a (2^{- j -
     J} m) \left| \lambda_{j, m} \left( \UU^{\varepsilon}_{>} f \right)
     \right| \]
  \[ = \sup_{k \geqslant - 1} \sup_{m \sim 2^k, L_k < j \leqslant N - J}
     2^{(\alpha - \beta) j} \rho^{a - b} (2^{- j - J} m) 2^{\beta j} \rho^b
     (2^{- j - J} m) | \lambda_{j, m} (f) | \]
  \[ \lesssim \| f \|_{B^{\beta, \varepsilon}_{\infty, \infty} (\rho^b)}
     \sup_{k \geqslant - 1} \sup_{m \sim 2^k, L_k < j \leqslant N - J}
     2^{(\alpha - \beta) j} \rho^{a - b} (2^{- j - J} m) \]
  \[ \lesssim \| f \|_{B^{\beta, \varepsilon}_{\infty, \infty} (\rho^b)}
     \sup_{k \geqslant - 1} 2^{(\alpha - \beta) L_k} \rho^{a - b} (2^k), \]
  where we used the fact that $a < b$, $2^{- j} < 2^{- L_k}$ and that the
  weight is decreasing to get
  \[ \rho^{a - b} (2^{- j - J} m) \lesssim \rho^{a - b} (2^{- L_k - J} 2^k)
     \lesssim \rho^{a - b} (2^k) . \]
  Now we set $c_k = - \log_2 \rho (2^k)$ to obtain
  \begin{equation}
    \left\| \UU_{>}^{\varepsilon} f \right\|_{B^{\alpha, \varepsilon}_{\infty,
    \infty} (\rho^a)} \lesssim \| f \|_{B^{\beta, \varepsilon}_{\infty,
    \infty} (\rho^b)} \sup_{k \geqslant - 1} 2^{- (\beta - \alpha) L_k + (b -
    a) c_k} . \label{eq:26}
  \end{equation}

  On the other hand, since $\gamma > \beta$ and $c > b$ we have by the same
  arguments
  \[ \left\| \UU^{\varepsilon}_{\leqslant} f \right\|_{B^{\gamma,
     \varepsilon}_{\infty, \infty} (\rho^c)} \lesssim \sup_{- 1 \leqslant j
     \leqslant N - J} 2^{\gamma j} \sup_{m \in \mathbb{Z}^d} \rho^c (2^{- j -
     J} m) \left| \lambda_{j, m} \left( \UU_{\leqslant}^{\varepsilon} f
     \right) \right| \]
  \[ = \sup_{k \geqslant - 1} \sup_{m \sim 2^k, - 1 \leqslant j \leqslant L_k
     \wedge (N - J)} 2^{(\gamma - \beta) j} \rho^{c - b} (2^{- j - J} m)
     2^{\beta j} \rho^b (2^{- j - J} m) | \lambda_{j, m} (f) | \]
  \begin{equation}
    \lesssim \| f \|_{B^{\beta, \varepsilon}_{\infty, \infty} (\rho^b)}
    \sup_{k \geqslant - 1} 2^{(\gamma - \beta) L_k - (c - b) c_k} .
    \label{eq:27}
  \end{equation}
  We see that if the weight is decreasing at infinity, then $c_k
  \rightarrow \infty$. From {\eqref{eq:26}} we obtain the condition $- (\beta
  - \alpha) L_k + (b - a) c_k = - (\beta - \alpha) L$ hence we shall choose
  $L_k = L + (b - a) c_k / (\beta - \alpha)$. Similarly, {\eqref{eq:27}}
  yields $(\gamma - \beta) L_k - (c - b) c_k = (\gamma - \beta) L$ hence $L_k
  = L + (c - b) c_k / (\gamma - \beta)$. Balancing these two conditions gives
  $(b - a) / (\beta - \alpha) = (c - b) / (\gamma - \beta)$ and completes the
  proof.
  \end{proof}

  \subsection{Duality and commutators}
  \label{s:l2}
  
In this section we define various commutators and establish suitable bounds.  We denote by $C_{\varepsilon}$  the operator introduced in
Lemma 4.4 {\cite{MP17}}, which for smooth functions satisfies
\begin{equation}\label{eq:ce}
 C_{\varepsilon} (f, g, h) = h \circ (f \prec g) - f (h \circ g) .
 \end{equation}
We recall that if $p, p_1, p_2 \in [1, \infty]$ and $\alpha, \beta, \gamma \in
\mathbb{R}$ are such that $\frac{1}{p} = \frac{1}{p_1} + \frac{1}{p_2}$,
$\alpha + \beta + \gamma > 0$ and $\beta + \gamma \neq 0$, then the following
bound holds
\begin{equation}
  \| C_{\varepsilon} (f, g, h) \|_{B^{\beta + \gamma, \varepsilon}_{p, \infty}
  (\rho_1 \rho_2 \rho_3)} \lesssim \| f \|_{B^{\alpha, \varepsilon}_{p_1,
  \infty} (\rho_1)} \| g \|_{B_{\infty, \infty}^{\beta, \varepsilon}
  (\rho_2)_{}} \| h \|_{B_{p_2, \infty}^{\beta, \varepsilon} (\rho_3)_{}} .
  \label{eq:comm}
\end{equation}
As the next step, we show that $g \succ$ is an approximate adjoint of $g
\circ$ in a suitable sense, as first noted in~\cite{gubinelli_semilinear_2018}.
  
  \begin{lemma}
  \label{lem:dual1}Let $\varepsilon \in \mathcal{A}$. Let $\alpha, \beta,
  \gamma \in \mathbb{R}$ be such that $\alpha, \gamma > 0$, $\beta + \gamma <
  0$ and $\alpha + \beta + \gamma > 0$ and let $\rho_1, \rho_2, \rho_3$ be
  weights and let $\rho = \rho_1 \rho_2 \rho_3$. There exists a bounded
  trilinear operator
  \[ D_{\rho, \varepsilon} (f, g, h) : H^{\alpha, \varepsilon} (\rho_1) \times
     \CC^{\beta, \varepsilon} (\rho_2) \times H^{\gamma, \varepsilon} (\rho_3)
     \rightarrow \mathbb{R} \]
  such that
  \[ | D_{\rho, \varepsilon} (f, g, h) | \lesssim \| f \|_{H^{\alpha,
     \varepsilon} (\rho_1)} \| g \|_{\CC^{\beta, \varepsilon} (\rho_2)} \| h
     \|_{H^{\gamma, \varepsilon} (\rho_3)} \]
  where the proportionality constant is independent of $\varepsilon$, and for
  smooth functions we have
  \[ D_{\rho, \varepsilon} (f, g, h) = \langle \rho f, g \circ h
     \rangle_{\varepsilon} - \langle \rho (f \prec g), h \rangle_{\varepsilon}
     . \]
\end{lemma}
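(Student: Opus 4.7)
I would prove the lemma by first establishing the bound on smooth triples $(f, g, h)$, for which the right-hand side of the defining identity is a finite real number, and then extending by density. The key algebraic step is a double Bony decomposition. Starting from $fg = f \prec g + f \circ g + f \succ g$ and $gh = g \prec h + g \circ h + g \succ h$, together with the pointwise identity $\langle \rho fg, h\rangle_\varepsilon = \langle \rho f, gh\rangle_\varepsilon$, one readily obtains
\begin{equation*}
D_{\rho,\varepsilon}(f,g,h) = \langle \rho(f \circ g), h\rangle_\varepsilon + \langle \rho(f \succ g), h\rangle_\varepsilon - \langle \rho f, g \prec h\rangle_\varepsilon - \langle \rho f, g \succ h\rangle_\varepsilon.
\end{equation*}
This reduces the task to bounding four pairings, each involving only a single paraproduct or resonant product of two factors paired with a third field.

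For the three paraproduct-type pairings I would combine weighted paraproduct estimates (the continuous extension of the discrete bounds cited after Lemma~\ref{lem:equiv2}) with the weighted duality Lemma~\ref{lem:dual2}. For example, to bound $\langle \rho f, g \prec h\rangle_\varepsilon = \langle f, \rho (g \prec h)\rangle_\varepsilon$ I would use the paraproduct bound $\|g \prec h\|_{H^{\beta+\gamma}(\rho_2 \rho_3)} \lesssim \|g\|_{\mathcal{C}^\beta(\rho_2)} \|h\|_{H^\gamma(\rho_3)}$ (valid since $\beta < 0$), the embedding $H^{\beta+\gamma}(\rho_2\rho_3) \hookrightarrow H^{-\alpha}(\rho_2\rho_3)$ (from $\alpha + \beta + \gamma \geq 0$), and Lemma~\ref{lem:dual2} together with Lemma~\ref{lem:equiv2} to shift the $\rho_2\rho_3$ factor into the weight of a dual norm. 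Analogous arguments handle $\langle \rho f, g \succ h\rangle_\varepsilon$ (using $\gamma > 0$) and $\langle \rho(f \succ g), h\rangle_\varepsilon$ (using $\alpha > 0$).

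The delicate term is $\langle \rho(f \circ g), h\rangle_\varepsilon$, since the weighted resonant estimate $\|f \circ g\|_{H^{\alpha+\beta}(\rho_1\rho_2)} \lesssim \|f\|_{H^\alpha(\rho_1)} \|g\|_{\mathcal{C}^\beta(\rho_2)}$ requires $\alpha + \beta > 0$, which is not guaranteed here (only $\alpha + \beta > -\gamma$). Instead I would proceed by a direct dyadic expansion
\begin{equation*}
\langle \rho(f \circ g), h\rangle_\varepsilon = \sum_{|i-k|\leq 1,\, m} \langle \rho\, \Delta^\varepsilon_i f\, \Delta^\varepsilon_k g,\, \Delta^\varepsilon_m h\rangle_\varepsilon,
\end{equation*}
and pointwise H\"older with the factorization $\rho = \rho_1 \rho_2 \rho_3$, giving
\begin{equation*}
\bigl|\langle \rho\, \Delta^\varepsilon_i f\, \Delta^\varepsilon_k g,\, \Delta^\varepsilon_m h\rangle_\varepsilon\bigr| \lesssim 2^{-\alpha i - \beta k - \gamma m}\, \tilde{a}_i\, b_k\, \tilde{c}_m,
\end{equation*}
with $\|\tilde{a}\|_{\ell^2} \lesssim \|f\|_{H^\alpha(\rho_1)}$, $\|b\|_{\ell^\infty} \lesssim \|g\|_{\mathcal{C}^\beta(\rho_2)}$, $\|\tilde{c}\|_{\ell^2} \lesssim \|h\|_{H^\gamma(\rho_3)}$ by Lemma~\ref{lem:equiv2}. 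The diagonal contribution $|m-k|\leq 1$ is bounded by $\ell^2$-Cauchy--Schwarz using $\alpha + \beta + \gamma > 0$.

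The main obstacle is the off-diagonal tail $|m - k| \gg 1$: in the unweighted setting it vanishes by the Fourier support of $\Delta^\varepsilon_i f\, \Delta^\varepsilon_k g$ being contained in $\{|\xi|\lesssim 2^k\}$, but multiplication by $\rho$ destroys exact Fourier localization. This is compensated by exploiting the smoothness of the polynomial weight via a commutator estimate $\|\Delta^\varepsilon_m(\rho u)\|_{L^{2}} \lesssim 2^{-N(m-k)} \|u\|_{L^{2}}$ (valid for $u$ Fourier-localized at scale $\sim 2^k$ and any $N \in \mathbb N$, since all derivatives of $\rho(x) = \langle hx\rangle^{-\nu}$ are bounded), combined with the admissibility $\rho(x)/\rho(y) \lesssim \rho^{-1}(x-y)$ to keep the weighted norms comparable, and the conditions $\alpha, \gamma > 0$ which give summable dyadic tails in $i$ and $m$ respectively. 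Once the smooth-data bound is in place, extension to the full product space follows from density of smooth functions in $H^\alpha(\rho_1)$ and $H^\gamma(\rho_3)$, together with the standard mild-enlargement-of-weight approximation argument for the non-separable space $\mathcal{C}^\beta(\rho_2)$, and the continuity of the trilinear form inherited from the uniform bound.
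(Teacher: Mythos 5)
Your reduction of $D_{\rho,\varepsilon}$ to the four pairings $\langle\rho(f\circ g), h\rangle_\varepsilon$, $\langle\rho(f\succ g), h\rangle_\varepsilon$, $-\langle\rho f, g\prec h\rangle_\varepsilon$, $-\langle\rho f, g\succ h\rangle_\varepsilon$ is algebraically correct, but it splits apart a cancellation that the lemma cannot do without. Consider $\langle\rho f, g\succ h\rangle_\varepsilon$: after absorbing the low-frequency block $\|S_{j-1}h\|_{L^{2,\varepsilon}(\rho_3)}\lesssim\|h\|_{H^{\gamma,\varepsilon}(\rho_3)}$, the remaining dyadic sum is $\sum_j 2^{-(\alpha+\beta)j}\tilde a_j$ with $(\tilde a_j)\in\ell^2$, which is controlled only when $\alpha+\beta>0$. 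The hypotheses guarantee only $\alpha+\beta>-\gamma$, and in the actual application of the lemma ($\alpha\approx\gamma\approx 1/2$, $\beta\approx -1-\kappa$) one has $\alpha+\beta<0$, so this piece diverges on its own. The same obstruction hides in the range $m\ll k$ of $\langle\rho(f\circ g), h\rangle_\varepsilon$: the Fourier-support argument you invoke disposes of $m\gg k$, but for $m\ll k$ the ball $\{|\xi|\lesssim 2^k\}$ containing $\operatorname{supp}\mathcal F(\Delta^\varepsilon_i f\,\Delta^\varepsilon_k g)$ does intersect the annulus of $\Delta^\varepsilon_m h$, the pairing is generically nonzero even in the unweighted case, and one is left once more with $\sum_k 2^{-(\alpha+\beta)k}\tilde a_k$. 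So two of your four pieces implicitly require $\alpha+\beta>0$, strictly stronger than what is assumed.

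The paper's proof avoids this precisely by never separating those two contributions. It writes (up to an inconsequential overall sign that the paper in fact misses, harmlessly for the absolute-value bound)
\[
D_{\rho,\varepsilon}(f,g,h)=-\langle\rho, C_\varepsilon(f,g,h)\rangle_\varepsilon-\langle\rho,(f\prec g)\prec h\rangle_\varepsilon-\langle\rho,(f\prec g)\succ h\rangle_\varepsilon,
\]
with $C_\varepsilon(f,g,h)=h\circ(f\prec g)-f(h\circ g)$ the commutator of Lemma~4.3 in \cite{MP17}. In every term $f$ is locked together with $g$ in the low-high paraproduct $f\prec g$, so $f$ sits strictly below the frequency of $g$ and only its $L^{2,\varepsilon}$-norm ever enters: no sum $\sum_k 2^{-(\alpha+\beta)k}$ appears. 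The commutator $C_\varepsilon$ gains $\alpha$ derivatives through the cancellation between $h\circ(f\prec g)$ and $f(h\circ g)$ and lands in $B^{\beta+\gamma,\varepsilon}_{p,\infty}$ under exactly $\alpha+\beta+\gamma>0$, $\beta+\gamma\neq 0$; the remaining paraproduct pieces and the final pairing against the smooth weight $\rho$ via $\|1\|_{B^{-\beta+\delta,\varepsilon}_{\infty,\infty}}<\infty$ are routine. To repair your argument you would have to recombine $\langle\rho(f\circ g),h\rangle_\varepsilon-\langle\rho f, g\succ h\rangle_\varepsilon$ by hand and establish the cancellation directly, which essentially reproves the commutator estimate of \cite{MP17}.
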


\begin{proof}
  We define
  \[ D_{\rho, \varepsilon} (f, g, h) \assign \langle \rho, C_{\varepsilon} (f,
     g, h) \rangle_{\varepsilon} - \langle \rho, (f \prec g) \succ h
     \rangle_{\varepsilon} - \langle \rho, (f \prec g) \prec h
     \rangle_{\varepsilon}, \]
  where $C_{\varepsilon}$ was defined above. Hence the desired formula holds
  for smooth functions. By {\eqref{eq:comm}} and the paraproduct estimates we
  have
  \[ \| C_{\varepsilon} (f, g, h) \|_{B^{\beta + \gamma - \delta,
     \varepsilon}_{1, 1} (\rho)} \lesssim \| C_{\varepsilon} (f, g, h)
     \|_{B^{\beta + \gamma, \varepsilon}_{1, \infty} (\rho)} \lesssim \| f
     \|_{B^{\alpha, \varepsilon}_{2, \infty} (\rho_1)} \| g \|_{B^{\beta,
     \varepsilon}_{\infty, \infty} (\rho_2)} \| h \|_{B^{\gamma,
     \varepsilon}_{2, \infty} (\rho_3)}, \]
  \[ \| (f \prec g) \succ h \|_{B^{\beta - \delta, \varepsilon}_{1, 1} (\rho)}
     \lesssim \| (f \prec g) \succ h \|_{B^{\beta, \varepsilon}_{1, \infty}
     (\rho)} \lesssim \| f \|_{B^{\alpha, \varepsilon}_{2, \infty} (\rho_1)}
     \| g \|_{B^{\beta, \varepsilon}_{\infty, \infty} (\rho_2)} \| h
     \|_{B^{\gamma, \varepsilon}_{2, \infty} (\rho_3)}, \]
  \[ \| (f \prec g) \prec h \|_{B^{\beta + \gamma - \delta, \varepsilon}_{1,
     1} (\rho)} \lesssim \| (f \prec g) \prec h \|_{B^{\beta + \gamma,
     \varepsilon}_{1, \infty} (\rho)} \lesssim \| f \|_{B^{\alpha,
     \varepsilon}_{2, \infty} (\rho_1)} \| g \|_{B^{\beta,
     \varepsilon}_{\infty, \infty} (\rho_2)} \| h \|_{B^{\gamma,
     \varepsilon}_{2, \infty} (\rho_3)}, \]
  and the right hand side is estimated by
  \[ \| f \|_{B^{\alpha, \varepsilon}_{2, \infty} (\rho_1)} \| g \|_{B^{\beta,
     \varepsilon}_{\infty, \infty} (\rho_2)} \| h \|_{B^{\gamma,
     \varepsilon}_{2, \infty} (\rho_3)} \lesssim \| f \|_{B^{\alpha,
     \varepsilon}_{2, 2} (\rho_1)} \| g \|_{B^{\beta, \varepsilon}_{\infty,
     \infty} (\rho_2)} \| h \|_{B^{\gamma, \varepsilon}_{2, 2} (\rho_3)} . \]
  Consequently,
  \[ | D_{\rho, \varepsilon} (f, g, h) | \lesssim \| 1 \|_{B^{- \beta +
     \delta, \varepsilon}_{\infty, \infty}} \| f \|_{B^{\alpha,
     \varepsilon}_{2, 2} (\rho_1)} \| g \|_{B^{\beta, \varepsilon}_{\infty,
     \infty} (\rho_2)} \| h \|_{B^{\gamma, \varepsilon}_{2, 2} (\rho_3)} \]
  which completes the proof.
  \end{proof}

Next, we show several commutator estimates. To this end, $\Delta_{\varepsilon}$ denotes the discrete
Laplacian on $\Lambda_{\varepsilon}$ and we define the corresponding elliptic
and parabolic operators by $\Q_{\varepsilon} \assign m^{2} -
\Delta_{\varepsilon}$ and $\LL_{\varepsilon} \assign \partial_t +
\Q_{\varepsilon}$, where $m^{2} > 0$.

\begin{lemma}
  \label{lem:comm1}Let $\varepsilon \in \mathcal{A}$. Let $\alpha, \beta,
  \gamma \in \mathbb{R}$ such that $\alpha\in (0,1)$, $\beta + \gamma + 2 < 0$ and $\alpha +
  \beta + \gamma + 2 > 0$. Let $\rho_1, \rho_2, \rho_3$ be space weights and
  let $\rho_4, \rho_5, \rho_6$ be space-time weights. Then there exist bounded
  trilinear operators
  \[ \tilde{C}_{\varepsilon} : H^{\alpha, \varepsilon} (\rho_1) \times
     \CC^{\beta, \varepsilon} (\rho_2) \times \CC^{\gamma + \delta,
     \varepsilon} (\rho_3) \rightarrow H^{\beta + \gamma + 2, \varepsilon}
     (\rho_1 \rho_2 \rho_3), \]
  \[ \bar{C}_{\varepsilon} : C_T \CC^{\alpha, \varepsilon} (\rho_4) \times C_T
     \CC^{\beta, \varepsilon} (\rho_5) \times C_T \CC^{\gamma + \delta,
     \varepsilon} (\rho_6) \rightarrow C_T \CC^{\beta + \gamma + 2,
     \varepsilon} (\rho_4 \rho_5 \rho_6) \]
  such that for every $\delta > 0$
  \[ \| \tilde{C}_{\varepsilon} (f, g, h) \|_{H^{\beta + \gamma + 2,
     \varepsilon} (\rho_1 \rho_2 \rho_3)} \lesssim \| f \|_{H^{\alpha,
     \varepsilon} (\rho_1)} \| g \|_{\CC^{\beta, \varepsilon} (\rho_2)} \| h
     \|_{\CC^{\gamma + \delta, \varepsilon} (\rho_3)}, \]
  \[ \| \bar{C}_{\varepsilon} (f, g, h) \|_{C_T \CC^{\beta + \gamma + 2,
     \varepsilon} (\rho_4 \rho_5 \rho_6)} \]
  \[ \lesssim \big( \| f \|_{C_T \CC^{\alpha, \varepsilon} (\rho_4)} + \| f
     \|_{C_T^{\alpha / 2} L^{\infty, \varepsilon} (\rho_4)} \big) \| g
     \|_{C_T \CC^{\beta, \varepsilon} (\rho_5)} \| h \|_{C_T \CC^{\gamma +
     \delta, \varepsilon} (\rho_6)}, \]
  where the proportionality constants are independent of $\varepsilon$, and
  for smooth functions we have
  \begin{equation}
    \tilde{C}_{\varepsilon} (f, g, h) = h \circ \Q_{\varepsilon}^{- 1} (f
    \prec g) - f \left( h \circ \Q_{\varepsilon}^{- 1} g \right), \label{eq:9}
  \end{equation}
  \[ \bar{C}_{\varepsilon} (f, g, h) = h \circ \LL_{\varepsilon}^{- 1} (f
     \prec g) - f \left( h \circ \LL_{\varepsilon}^{- 1} g \right) . \]
\end{lemma}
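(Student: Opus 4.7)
The plan is to reduce both statements to two well-understood commutator bounds: (i) the already established trilinear commutator estimate \eqref{eq:comm} for $C_{\varepsilon}$, and (ii) a standard commutator between a Fourier multiplier (either $\Q_{\varepsilon}^{-1}$ or $\LL_{\varepsilon}^{-1}$) and the paraproduct operator $f \prec \cdot$. The construction of $\tilde C_{\varepsilon}$ and $\bar C_{\varepsilon}$ on non-smooth arguments will then follow by density and the continuity of the bilinear/trilinear operations involved.

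For the elliptic commutator $\tilde C_{\varepsilon}$ I would add and subtract $h \circ (f \prec \Q_{\varepsilon}^{-1} g)$ to split
\[
\tilde C_{\varepsilon}(f,g,h) = \underbrace{h \circ \bigl[\Q_{\varepsilon}^{-1}(f\prec g) - f \prec \Q_{\varepsilon}^{-1} g\bigr]}_{=: A} + \underbrace{h \circ (f \prec \Q_{\varepsilon}^{-1} g) - f\bigl(h \circ \Q_{\varepsilon}^{-1} g\bigr)}_{=: B}.
\]
The second piece is $B = C_{\varepsilon}(f, \Q_{\varepsilon}^{-1} g, h)$; since $\Q_{\varepsilon}^{-1}$ is bounded from $\mathscr{C}^{\beta,\varepsilon}(\rho_2)$ into $\mathscr{C}^{\beta+2,\varepsilon}(\rho_2)$ (by the lattice Schauder estimates from \cite{MP17}) and $\alpha+(\beta+2)+\gamma>0$ with $(\beta+2)+\gamma\ne 0$, \eqref{eq:comm} (with the $H^{\alpha,\varepsilon}=B^{\alpha,\varepsilon}_{2,2}$ scale via Lemma~\ref{lem:emb}) yields the required bound on $B$ in $H^{\beta+\gamma+2,\varepsilon}(\rho_1\rho_2\rho_3)$. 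For $A$, I would exploit the Littlewood--Paley decomposition: writing $f\prec g = \sum_{j} S_{j-1}^{\varepsilon} f \, \Delta_{j}^{\varepsilon} g$ and using that $\Q_{\varepsilon}^{-1}$ is a radial Fourier multiplier with smooth symbol of order $-2$ (uniformly in $\varepsilon$ on the support of $\varphi^{\varepsilon}_j$ for $j$ up to $N-J$), the commutator $[\Q_{\varepsilon}^{-1}, S_{j-1}^{\varepsilon} f](\Delta_j^{\varepsilon} g)$ gains one extra derivative over the naive Schauder estimate. A Taylor expansion of the symbol of $\Q_{\varepsilon}^{-1}$ combined with a mean-value-type bound on $S_{j-1}^{\varepsilon} f$ — paralleling the proof of Lemma~2.5 in \cite{GIP} and its lattice version in \cite{MP17} — produces the uniform estimate $\|A\|_{H^{\beta+\gamma+2,\varepsilon}(\rho_1\rho_2\rho_3)} \lesssim \|f\|_{H^{\alpha,\varepsilon}(\rho_1)} \|g\|_{\mathscr{C}^{\beta,\varepsilon}(\rho_2)} \|h\|_{\mathscr{C}^{\gamma+\delta,\varepsilon}(\rho_3)}$, where the extra $\delta>0$ absorbs the loss coming from passing between the $B^{\cdot,\varepsilon}_{\cdot,\infty}$ and $B^{\cdot,\varepsilon}_{\cdot,\infty}$-type scales. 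Polynomial weights pose no difficulty since each Littlewood--Paley block of the multiplier has a rapidly decaying kernel, and Lemma~\ref{lem:equiv2} and Lemma~\ref{lem:young} allow one to move the weight through the paraproducts.

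The parabolic commutator $\bar C_{\varepsilon}$ is handled by the exact same splitting with $\LL_{\varepsilon}^{-1}$ in place of $\Q_{\varepsilon}^{-1}$. The analogue of $B$ is bounded via \eqref{eq:comm} after using that $\LL_{\varepsilon}^{-1}$ gains $2$ derivatives (for free), while for the analogue of $A$ one has to commute a paraproduct with the parabolic Green's function $\int_0^t e^{-(t-s)\Q_{\varepsilon}}\cdot\, ds$. The key identity is
\[
\LL_{\varepsilon}^{-1}(f \prec g)(t) - (f(t) \prec \LL_{\varepsilon}^{-1} g)(t) = \int_0^t e^{-(t-s)\Q_{\varepsilon}}\bigl((f(s)-f(t))\prec g(s)\bigr)\,ds + \big[\LL_{\varepsilon}^{-1},f(t)\prec\big]g,
\]
so the time-increment of $f$ appears explicitly; on each Littlewood--Paley block the factor $e^{-(t-s)\Q_{\varepsilon}}\Delta_j^{\varepsilon}$ restricts effectively $t-s \lesssim 2^{-2j}$, and the Hölder-in-time bound $\|f(t)-f(s)\|_{L^{\infty,\varepsilon}(\rho_4)}\lesssim |t-s|^{\alpha/2}\|f\|_{C_T^{\alpha/2}L^{\infty,\varepsilon}(\rho_4)}$ gains exactly $\alpha$ derivatives in space, whereas the purely elliptic commutator piece contributes a factor $\|f\|_{C_T\mathscr{C}^{\alpha,\varepsilon}(\rho_4)}$. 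Combining these gives the sum of the two $f$-norms appearing in the statement. This parabolic commutator is exactly the main obstacle, because one has to track the interplay between the time-regularity of $f$ and the spatial smoothing of the heat semigroup uniformly in the lattice spacing $\varepsilon$; the estimate mirrors Lemma~5 of \cite{GIP} and the lattice parabolic Schauder estimates in \cite{MP17}, and the admissibility of the polynomial weights — ensuring that $e^{-(t-s)\Q_{\varepsilon}}$ acts continuously on weighted $L^{\infty}$-spaces with norm independent of $\varepsilon$ — is verified exactly as in the estimates of Section~\ref{ssec:stoch}. Once both $A$ and $B$ pieces are controlled, the required continuity of $\bar C_{\varepsilon}$ on the claimed product of Besov spaces follows, and the identity for smooth arguments holds by construction.
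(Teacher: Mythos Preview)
Your proposal is correct and, for the elliptic commutator $\tilde C_{\varepsilon}$, essentially matches the paper: the same splitting into $A+B$ with $B=C_{\varepsilon}(f,\Q_{\varepsilon}^{-1}g,h)$ controlled via \eqref{eq:comm}. For $A$ the paper uses a slightly different device: instead of a Taylor expansion of the symbol of $\Q_{\varepsilon}^{-1}$, it rewrites $\Q_{\varepsilon}^{-1}(f\prec g)-f\prec\Q_{\varepsilon}^{-1}g=\Q_{\varepsilon}^{-1}\bigl[f\prec\Q_{\varepsilon}v-\Q_{\varepsilon}(f\prec v)\bigr]$ with $v=\Q_{\varepsilon}^{-1}g$, and then applies the discrete Leibniz identity for $\Delta_{\varepsilon}$ (controlling the bilinear form $\nabla_{\varepsilon}f\prec\nabla_{\varepsilon}g$ as in \cite{MP17}). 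Both routes yield the same estimate; the paper's version avoids the symbol-level argument and works entirely with the operator identities.

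For the parabolic commutator $\bar C_{\varepsilon}$ your route is genuinely different. The paper does \emph{not} use the Duhamel identity with time-increments of $f$ that you wrote down; instead it introduces the modified (time-mollified) paraproduct $f\precprec g:=\sum_{i<j-1}\Delta_i^{\varepsilon}(Q_if)\,\Delta_j^{\varepsilon}g$, where $Q_i$ is a temporal averaging at parabolic scale $2^{-2i}$, and decomposes $\bar C_{\varepsilon}$ into four pieces: the commutator $h\circ[\LL_{\varepsilon}^{-1}(f\precprec g)-f\precprec\LL_{\varepsilon}^{-1}g]$, two error terms $h\circ\LL_{\varepsilon}^{-1}(f\prec g-f\precprec g)$ and $h\circ(f\precprec\LL_{\varepsilon}^{-1}g-f\prec\LL_{\varepsilon}^{-1}g)$, and $C_{\varepsilon}(f,\LL_{\varepsilon}^{-1}g,h)$; the bounds are then read off from Lemma~4.7 in \cite{MP17}. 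Your direct approach is the one from the original paracontrolled paper \cite{GIP} and is arguably more transparent (the $C_T^{\alpha/2}$ norm of $f$ appears explicitly from the increment $f(s)-f(t)$), whereas the modified-paraproduct approach packages the space-time regularity into a single object and plugs directly into the existing lattice estimates of \cite{MP17}, which is convenient for citing uniform-in-$\varepsilon$ bounds without reproving them.
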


\begin{proof}
  First, we define
  \[ \tilde{C}_{\varepsilon} (f, g, h) \assign h \circ \left[
     \Q_{\varepsilon}^{- 1} (f \prec g) - f \prec \Q_{\varepsilon}^{- 1} g
     \right] + C_{\varepsilon} \left( f, \Q_{\varepsilon}^{- 1} g, h \right),
  \]
  where $C_{\varepsilon}$ was introduced above. Hence for smooth functions we
  obtain the desired formula {\eqref{eq:9}}. Moreover, by {\eqref{eq:comm}}
  the operator $C_{\varepsilon}$ can be estimated (uniformly in $\varepsilon$)
  for $\delta > 0$ as
  \[ \left\| C_{\varepsilon} \left( f, \Q_{\varepsilon}^{- 1} g, h \right)
     \right\|_{H^{\beta + \gamma + 2, \varepsilon} (\rho_1 \rho_2 \rho_3)}
     \lesssim \left\| C_{\varepsilon} \left( f, \Q_{\varepsilon}^{- 1} g, h
     \right) \right\|_{B_{2, \infty}^{\beta + \gamma + 2 + \delta,
     \varepsilon} (\rho_1 \rho_2 \rho_3)} \]
  \[ \lesssim \| f \|_{B_{2, \infty}^{\alpha, \varepsilon} (\rho_1)} \| g
     \|_{\CC^{\beta, \varepsilon} (\rho_2)} \| h \|_{\CC^{\gamma + \delta,
     \varepsilon} (\rho_3)} \lesssim \| f \|_{H^{\alpha, \varepsilon}
     (\rho_1)} \| g \|_{\CC^{\beta, \varepsilon} (\rho_2)} \| h
     \|_{\CC^{\gamma + \delta, \varepsilon} (\rho_3)} . \]
  For the first term in $\tilde{C}_{\varepsilon}$ we write
  \[ \Q_{\varepsilon}^{- 1} (f \prec g) - f \prec \Q_{\varepsilon}^{- 1} g =
     \Q_{\varepsilon}^{- 1} \left[ f \prec \Q_{\varepsilon}
     \Q_{\varepsilon}^{- 1} g - \Q_{\varepsilon} \left( f \prec
     \Q_{\varepsilon}^{- 1} g \right) \right] \]
  and as a consequence
  \[ \left\| h \circ \left[ \Q_{\varepsilon}^{- 1} (f \prec g) - f \prec
     \Q_{\varepsilon}^{- 1} g \right] \right\|_{H^{\alpha + \beta + \gamma +
     2, \varepsilon} (\rho_1 \rho_2 \rho_3)} \]
  \[ \lesssim \| h \|_{\CC^{\gamma + \delta, \varepsilon} (\rho_3)} \left\| f
     \prec \Q_{\varepsilon} \Q_{\varepsilon}^{- 1} g - \Q_{\varepsilon} \left(
     f \prec \Q_{\varepsilon}^{- 1} g \right) \right\|_{H^{\alpha + \beta -
     \delta, \varepsilon} (\rho_1 \rho_2)} . \]
  Finally, we observe that due to an argument similar to Lemma 4.9 
  {\cite{MP17}} we may control
  \[ \nabla_{\varepsilon} f \prec \nabla_{\varepsilon} g \assign
     \frac12\big(\Delta_{\varepsilon} (f \prec g) - \Delta_{\varepsilon} f \prec g - f
     \prec \Delta_{\varepsilon} g\big), \]
 hence
  \[ \left\| f \prec \Q_{\varepsilon} \Q_{\varepsilon}^{- 1} g -
     \Q_{\varepsilon} \left( f \prec \Q_{\varepsilon}^{- 1} g \right)
     \right\|_{H^{\alpha + \beta - \delta, \varepsilon} (\rho_1 \rho_2)} \]
  \[ \lesssim \left\| f \prec \Q_{\varepsilon} \Q_{\varepsilon}^{- 1} g -
     \Q_{\varepsilon} \left( f \prec \Q_{\varepsilon}^{- 1} g \right)
     \right\|_{B_{2, \infty}^{\alpha + \beta, \varepsilon} (\rho_1 \rho_2)}
     \lesssim \| f \|_{B_{2, \infty}^{\alpha, \varepsilon} (\rho_1)} \| g
     \|_{\CC^{\beta, \varepsilon} (\rho_2)} \]
  \[ \lesssim \| f \|_{H_{}^{\alpha, \varepsilon} (\rho_1)} \| g
     \|_{\CC^{\beta, \varepsilon} (\rho_2)} . \]
  We proceed similarly for the parabolic commutator $\bar{C}_{\varepsilon}$,
  but include additionally a modified paraproduct given by
  \[ f \precprec g \assign \sum_{1 \leqslant i, j \leqslant N - J, i < j - 1}
     \Delta^{\varepsilon}_i Q_i f \Delta^{\varepsilon}_j g, \]
  where
  \[ Q_i f (t) = \int_{\mathbb{R}} 2^{2 i} Q (2^{2 i} (t - s)) f ((s \vee 0)
     \wedge T) \mathd s \]
  for some smooth, nonnegative, compactly supported function $Q : \mathbb{R}
  \rightarrow \mathbb{R}$ that integrates to $1$. Namely, we define
  \[ \bar{C}_{\varepsilon} (f, g, h) \assign h \circ \left[
     \LL_{\varepsilon}^{- 1} (f \precprec g) - f \precprec
     \LL_{\varepsilon}^{- 1} g \right] + h \circ \left[ \LL_{\varepsilon}^{-
     1} (f \prec g - f \precprec g) \right] \]
  \[ + h \circ \left[ f \precprec \LL_{\varepsilon}^{- 1} g - f \prec
     \LL_{\varepsilon}^{- 1} g \right] + C_{\varepsilon} \left( f,
     \LL_{\varepsilon}^{- 1} g, h \right), \]
  and observe that for smooth functions
  \[ \bar{C}_{\varepsilon} (f, g, h) = h \circ \left[ \LL_{\varepsilon}^{- 1}
     (f \prec g) - f \prec \LL_{\varepsilon}^{- 1} g \right] + \left[ h \circ
     \left( f \prec \LL_{\varepsilon}^{- 1} g \right) - f \left( h \circ
     \LL_{\varepsilon}^{- 1} g \right) \right] \]
  \[ = h \circ \LL_{\varepsilon}^{- 1} (f \prec g) - f \left( h \circ
     \LL_{\varepsilon}^{- 1} g \right), \]
  and the desired bound follows from Lemma 4.9 in {\cite{MP17}} and
  {\eqref{eq:comm}}.
  \end{proof}

\subsection{Extension operators}

\label{s:ext}

In order to construct the Euclidean quantum field theory as a
limit of lattice approximations, we  need a suitable extension operator
that allows to extend distributions defined on the lattice
$\Lambda_{\varepsilon}$ to the full space $\mathbb{R}^d$.
To this end, we proceed as in Section 2.4, page 2072 in \cite{MP17}. Namely, let $\psi$ be a smooth and radially symmetric smear function satisfying the properties 1., 2., 3. on page 2072 in \cite{MP17} and let $\psi^{\varepsilon}(\cdot)=\psi(\varepsilon\cdot)$. We define
$$
\mathcal{E}^{\varepsilon}f:=\mathcal{F}_{\mathbb{R}^{d}}^{-1}\big(\psi^{\varepsilon}(\mathcal{F}_{\Lambda_{\varepsilon}}f)_{\rm ext}\big),\qquad f\in\mathcal{S}'(\Lambda_{\varepsilon}),
$$
where  $(\cdot)_{\rm ext}:\mathcal{S}'((\varepsilon^{-1}\mathbb{T})^{d})\to \mathcal{S}'(\mathbb{R}^{d})$ is the periodic extension operator defined by
$$
g_{\rm ext}(\varphi):=g\left(\sum_{k\in (\varepsilon^{-1}\mathbb{Z})^{d}}\varphi(\cdot-k)\right),\qquad \varphi\in\mathcal{S}(\mathbb{R}^{d}).
$$
With the definition of the Dirac comb distribution $f_{\rm dir}\in\mathcal{S}'(\mathbb{R}^{d})$ as in (10) in \cite{MP17}
$$
f_{\rm dir}=\varepsilon^{d}\sum_{k\in\Lambda_{\varepsilon}}f(k)\delta(\cdot - k),\qquad f\in \mathcal{S}'(\Lambda_{\varepsilon}),
$$
it was observed in (14) in \cite{MP17} that
$$
(\mathcal{F}_{\Lambda_{\varepsilon}}f)_{\rm ext}=\mathcal{F}_{\mathbb{R}^{d}}(f_{\rm dir}).
$$
Hence
\[ \mathcal{E}^{\varepsilon} f = \mathcal{F}_{\mathbb{R}^d}^{- 1} (\psi^{\varepsilon}
   (\mathcal{F}_{\Lambda_{\varepsilon}} f)_{\tmop{ext}}) = (\mathcal{F}_{\mathbb{R}^d}^{- 1}
   \psi^{\varepsilon}) \ast \mathcal{F}_{\mathbb{R}^d}^{- 1} \mathcal{F}_{\mathbb{R}^d}
   (f_{\tmop{dir}}) \backassign w^{\varepsilon} \ast f_{\tmop{dir}} =
   w^{\varepsilon} \ast_{\varepsilon} f, \]
where $w^{\varepsilon}(\cdot)=\mathcal{F}_{\mathbb{R}^d}^{- 1}
   \psi^{\varepsilon}(\cdot)=\varepsilon^{-d}\mathcal{F}_{\mathbb{R}^d}^{- 1}
   \psi(\varepsilon^{-1}\cdot)=:\varepsilon^{-d}w(\varepsilon^{-1}\cdot)\in\mathcal{S}(\mathbb{R}^{d})$. 
   With a slight abuse of notation we used the same notation  $\ast_{\varepsilon}$ as for the convolution on the lattice
$\Lambda_{\varepsilon}$ to denote the operation
$$
(w^{\varepsilon} \ast_{\varepsilon} f) (x):=\varepsilon^{d}\sum_{y\in\Lambda_{\varepsilon}}w^{\varepsilon}(x-y)f(y),\qquad x\in \mathbb{R}^{d},
$$
which defines a function on the full space $\mathbb{R}^{d}$.
Note that since $\psi$ is radially symmetric,  $w$ is radially symmetric as well.

%

The following  result is Lemma 2.24 in \cite{MP17}.

\begin{lemma}
  \label{lem:ext}Let $\alpha \in \mathbb{R}$, $p, q \in [1, \infty]$ and let
  $\rho$ be a weight. Then the operators
  \[ \mathcal{E}^{\varepsilon} : B^{\alpha, \varepsilon}_{p, q} (\rho)
     \rightarrow B^{\alpha}_{p, q} (\rho) \]
  are bounded uniformly in $\varepsilon$.
\end{lemma}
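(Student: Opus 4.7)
The strategy is to compare the continuum Littlewood--Paley decomposition of $\mathcal{E}^{\varepsilon} f$ with the lattice Littlewood--Paley decomposition of $f$, splitting at the threshold $j \approx N-J$. By Lemma~\ref{lem:equiv2} we may reduce to the unweighted version (or carry the weight throughout via its admissibility), and by associativity of the mixed discrete--continuous convolution I would write, for each $j \geqslant -1$,
\[
\Delta_j \mathcal{E}^{\varepsilon} f \;=\; K_j \ast (w^{\varepsilon} \ast_{\varepsilon} f) \;=\; (K_j \ast w^{\varepsilon}) \ast_{\varepsilon} f \;=\; \tilde{K}_j \ast_{\varepsilon} f,
\]
where $K_j = \mathcal{F}^{-1}\varphi_j$. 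The proof then amounts to controlling the $\tilde{K}_j$ uniformly in $\varepsilon$ and summing in $j$ with the correct power of $2^j$.

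\emph{Low frequencies $j \leqslant N-J-1$.} By construction of the discrete partition of unity \eqref{eq:p1}, the support of $\varphi_j$ lies strictly inside the fundamental cell $\hat\Lambda_{\varepsilon}$, so $\varphi_j = \varphi_j^{\varepsilon}$ on $\hat\Lambda_{\varepsilon}$ and in Fourier one has the exact identity
\[
\mathcal{F}(\Delta_j \mathcal{E}^{\varepsilon} f)(\xi) \;=\; \varphi_j(\xi)\, \hat{w}(\varepsilon \xi)\, \mathcal{F}_{\Lambda_\varepsilon} f(\xi), \qquad \xi \in \mathbb{R}^d,
\]
where $\mathcal{F}_{\Lambda_\varepsilon} f$ denotes the $\hat\Lambda_\varepsilon$-periodic extension. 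This yields $\Delta_j \mathcal{E}^{\varepsilon} f = \mathcal{E}^{\varepsilon}(\Delta_j^{\varepsilon} f) = w^{\varepsilon} \ast_{\varepsilon} \Delta_j^{\varepsilon} f$. The weighted Young inequality of Lemma~\ref{lem:young} (second form, with $r=p=q$) then gives
\[
\|\Delta_j \mathcal{E}^{\varepsilon} f\|_{L^p(\rho)} \;\lesssim\; \|w^\varepsilon\|_{L^{1,0}(\rho^{-1})}^{1/p}\bigl(\sup_y \|(\rho^{-1} w^\varepsilon)(y-\cdot)\|_{L^{1,\varepsilon}}\bigr)^{(p-1)/p} \|\Delta_j^{\varepsilon} f\|_{L^{p,\varepsilon}(\rho)}.
\]
Both $\varepsilon$-dependent factors are bounded uniformly in $\varepsilon$ since $w^{\varepsilon}$ is supported in $B_{\varepsilon/2}$ with $\|w^{\varepsilon}\|_{L^1}=1$ and $\rho^{-1}$ is locally bounded; hence the constant is uniform.

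\emph{High frequencies $j \geqslant N-J$.} Here the continuum block probes Fourier scales $\sim 2^j$ that either overlap the boundary of the fundamental cell or lie entirely outside it, and $\hat{w}(\varepsilon \xi)$ starts to contribute decisively. Using that $w$ is Schwartz, so $|\hat{w}(\varepsilon\xi)| \lesssim_M (1 + 2^{j-N})^{-M}$ for any $M$ when $|\xi|\sim 2^j$, a routine kernel estimate (integrating by parts, as in the proof of Lemma~\ref{lem:grad}) yields $\|\tilde{K}_j\|_{L^{1,0}(\rho^{-1})} + \sup_y\|(\rho^{-1}\tilde K_j)(y-\cdot)\|_{L^{1,\varepsilon}} \lesssim 2^{-M(j-N)_+}$. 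Combined with Young as above and with the observation that $\Delta_j \mathcal{E}^\varepsilon f = \tilde K_j \ast_\varepsilon f$ and that the full lattice mass of $f$ is represented by $\Delta^\varepsilon_{N-J}$ (plus lower blocks), this gives, for any $M$ larger than $|\alpha|$,
\[
\|\Delta_j \mathcal{E}^{\varepsilon} f\|_{L^p(\rho)} \;\lesssim\; 2^{-M(j-N)_+}\, 2^{-\alpha(N-J)}\,\|f\|_{B^{\alpha,\varepsilon}_{p,\infty}(\rho)}.
\]

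\emph{Summation.} Putting the two cases together,
\[
\sum_{j \geqslant -1} 2^{\alpha j q}\|\Delta_j \mathcal{E}^{\varepsilon} f\|_{L^p(\rho)}^q \;\lesssim\; \sum_{j\leqslant N-J} 2^{\alpha j q}\|\Delta^{\varepsilon}_j f\|_{L^{p,\varepsilon}(\rho)}^q + \Bigl(\sum_{j>N-J} 2^{(\alpha-M)(j-N)q}\Bigr)\|f\|_{B^{\alpha,\varepsilon}_{p,q}(\rho)}^q,
\]
and the geometric tail converges when $M>\alpha$, yielding $\|\mathcal{E}^{\varepsilon} f\|_{B^{\alpha}_{p,q}(\rho)} \lesssim \|f\|_{B^{\alpha,\varepsilon}_{p,q}(\rho)}$ with constant independent of $\varepsilon$. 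The main technical point is the uniform bound on the hybrid convolution kernels $\tilde K_j$ in weighted $L^1$ norms, which is where the smoothness of $w$ and the admissibility of $\rho$ are used; everything else is bookkeeping.
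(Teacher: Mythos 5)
Your proof follows the same overall route as the paper: you split the continuum Littlewood--Paley sum at the lattice cutoff $j \approx N-J$, treat the low-frequency blocks by commuting $\Delta_j$ past $\mathcal{E}^\varepsilon$ and invoking the hybrid weighted Young inequality of Lemma~\ref{lem:young}, and control the high-frequency tail by exploiting rapid decay coming from the smoothness of $w$, finishing with a geometric summation whose convergence requires a parameter exceeding $|\alpha|$. The one substantive difference is in how the high-frequency kernel is controlled: the paper first factors $\Delta_i(\mathcal{E}^\varepsilon \Delta_{N-J}^\varepsilon f) = (\bar K_i \ast w^\varepsilon)\ast K_i \ast_\varepsilon \Delta_{N-J}^\varepsilon f$ and then bounds $\|\bar K_i \ast w^\varepsilon\|_{L^{1,0}(\rho^{-1})} \lesssim (\varepsilon 2^i)^{-b}\|w\|_{B^b_{\infty,\infty}(\rho^{-1})}$ via a Littlewood--Paley decomposition of the mollifier $w$ itself (with a two-case choice $b=0$ for $\alpha<0$ and $b>\alpha q$ for $\alpha\geq 0$), whereas you estimate the combined kernel $\tilde K_j = K_j \ast w^\varepsilon$ directly from the Fourier-side decay $|\hat w(\varepsilon\xi)|\lesssim_M (1+2^{j-N})^{-M}$ by the usual integration-by-parts argument. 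These are two ways of saying the same thing; your version is arguably more transparent while the paper's is a touch more self-contained within its discrete Besov toolbox. Two minor slips worth fixing: ``$w$ is Schwartz'' should read ``$\hat w$ decays rapidly since $w\in C_c^\infty$'', and the displayed geometric tail should really be $\sum_{N-J<j\leq N}2^{\alpha q(j-N+J)}+\sum_{j>N}2^{\alpha qJ+(\alpha-M)(j-N)q}$, both of which are bounded by an $\varepsilon$-independent constant; the form you wrote collapses two regimes that should be handled slightly differently, though the conclusion is unaffected.
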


\subsection{A Schauder estimate}

In this section we establish a suitable Schauder-type estimate needed in Section~\ref{s:chi-reg}.

\begin{lemma}
  \label{lem:Pt}Let $\rho$ be a weight and let $P^{\varepsilon}_t = e^{t
  (\Delta_{\varepsilon} - m^2)}$ denote the semigroup generated by
  $\Delta_{\varepsilon} - m^2$. Then there exists $c > 0$ uniform in
  $\varepsilon$ such that for all $- 1 \leqslant j \leqslant N - J$
  \[ \| P^{\varepsilon}_t \Delta^{\varepsilon}_j f \|_{L^{1, \varepsilon}
     (\rho)} \lesssim e^{- t (m^2 + c 2^{2 j})} \| \Delta^{\varepsilon}_j f
     \|_{L^{1, \varepsilon} (\rho)}, \]
  where the proportionality constant does not depend on $\varepsilon$ and $t
  \geqslant 0$.
\end{lemma}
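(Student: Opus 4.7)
The plan is as follows. First I would factor $P^\varepsilon_t = e^{-tm^2}\,e^{t\Delta_\varepsilon}$ and pull out the mass decay, so it remains to show $\|e^{t\Delta_\varepsilon}\Delta^\varepsilon_j f\|_{L^{1,\varepsilon}(\rho)} \lesssim e^{-ct 2^{2j}}\|\Delta^\varepsilon_j f\|_{L^{1,\varepsilon}(\rho)}$. Using $\Delta^\varepsilon_j = \bar\Delta^\varepsilon_j \Delta^\varepsilon_j$ with $\bar\Delta^\varepsilon_j = \sum_{i\sim j}\Delta^\varepsilon_i$, I write $e^{t\Delta_\varepsilon}\Delta^\varepsilon_j f = K^\varepsilon_{j,t} \ast_\varepsilon \Delta^\varepsilon_j f$ where $K^\varepsilon_{j,t} = \mathcal{F}^{-1}(e^{-t l_\varepsilon(k)}\bar\varphi^\varepsilon_j(k))$ and $\bar\varphi^\varepsilon_j = \sum_{i\sim j}\varphi^\varepsilon_i$. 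Weighted Young's inequality (Lemma~\ref{lem:young} with $p=q=r=1$) then reduces the problem to proving the uniform bound
\begin{equation*}
  \|K^\varepsilon_{j,t}\|_{L^{1,\varepsilon}(\rho^{-1})} \lesssim e^{-ct 2^{2j}}.
\end{equation*}

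The second step is a spectral lower bound on the support of $\bar\varphi^\varepsilon_j$: there exists $c>0$ independent of $\varepsilon$ such that $l_\varepsilon(k) \geqslant c\, 2^{2j}$ for all $k\in\mathrm{supp}\,\bar\varphi^\varepsilon_j$ and $-1\leqslant j\leqslant N-J$. For $j\leqslant N-J-1$ the support is contained in $\{|k|\sim 2^j\}$ with $\varepsilon|k|\lesssim 2^{-\ell}$, and since $\sin^2(\pi s)/(\pi s)^2$ is bounded below on the compact set $|s|\leqslant 2^{-\ell}$, one has $l_\varepsilon(k)\sim|k|^2\sim 2^{2j}$. For $j=N-J$ the support reaches the boundary of the Brillouin zone, where $|\varepsilon k_i|\gtrsim 2^{-J-1}$ for some coordinate $i$, so $l_\varepsilon(k)\gtrsim\varepsilon^{-2} = 2^{2N}\geqslant 2^{2(N-J)}$.

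Third, I factor $e^{-tl_\varepsilon(k)}\bar\varphi^\varepsilon_j(k) = e^{-ct2^{2j}}\,M^\varepsilon_{j,t}(k)$ with $M^\varepsilon_{j,t}(k)\assign e^{-t(l_\varepsilon(k)-c2^{2j})}\bar\varphi^\varepsilon_j(k)$, and reduce the uniform control of $\mathcal{F}^{-1}M^\varepsilon_{j,t}$ to the case $j=0$ by rescaling: writing $k=2^j k'$, setting $\tilde\varepsilon=\varepsilon 2^j$ (which satisfies $\tilde\varepsilon\leqslant 2^{-\ell}$ uniformly), and $\tilde t = t 2^{2j}$, the rescaled multiplier becomes $e^{-\tilde t(l_{\tilde\varepsilon}(k')-c)}\bar\varphi^{\tilde\varepsilon}_0(k')$ on a support of order one. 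Standard integration by parts in Fourier space (repeated $M$ times with $M>d+\nu$, where $\nu$ is the order of $\rho$) then yields a uniform pointwise bound
\begin{equation*}
  |\mathcal{F}^{-1}M^\varepsilon_{j,t}(x)| \lesssim 2^{jd}(1+2^j|x|)^{-M},
\end{equation*}
since on the support one has $l_{\tilde\varepsilon}-c\geqslant 0$ bounded with derivatives bounded uniformly in $\tilde\varepsilon$, and the exponential $e^{-\tilde t(l_{\tilde\varepsilon}-c)}$ together with its derivatives are uniformly controlled in $\tilde t\geqslant 0$. Integrating this pointwise estimate against $\rho^{-1}(x)\leqslant \langle hx\rangle^\nu$ gives, after the substitution $z=2^j x$ and the monotonicity $\langle h 2^{-j}z\rangle^\nu\leqslant\langle hz\rangle^\nu$ for $j\geqslant 0$, a bound uniform in $j$ and $\varepsilon$. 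The case $j=-1$ is easier since the multiplier is smooth and compactly supported uniformly in $\varepsilon$.

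The main technical obstacle is the uniformity in $\varepsilon$ of the derivative estimates needed for the integration by parts, since $l_\varepsilon$ and $\bar\varphi^\varepsilon_j$ both carry $\varepsilon$-dependence; the rescaling trick handles this by collapsing the $\varepsilon$-dependence into $\tilde\varepsilon\in(0,2^{-\ell}]$, a uniformly bounded parameter range on which the relevant smooth functions and their derivatives are controlled.
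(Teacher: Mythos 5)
Your proposal is correct and follows essentially the same line as the paper's proof: factor out the mass decay, write $P^\varepsilon_t\Delta^\varepsilon_j$ as a lattice convolution with a kernel whose Fourier symbol is supported on an annulus of radius $\sim 2^j$, obtain a rapidly decaying pointwise bound on the rescaled kernel by repeated integration by parts in the Fourier variable, and close with the weighted Young inequality (Lemma~\ref{lem:young}). The one small point you gloss over is in the statement that the exponential $e^{-\tilde t(l_{\tilde\varepsilon}-c)}$ and its $k'$-derivatives are uniformly controlled in $\tilde t \geqslant 0$: a $k'$-derivative brings down factors of $\tilde t\,\partial l_{\tilde\varepsilon}$, and if $l_{\tilde\varepsilon}-c$ were allowed to vanish on the support this would not be absorbed. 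You need the lower bound from your second step to hold with room to spare — e.g.\ $l_{\tilde\varepsilon}\geqslant 2c$ on $\mathrm{supp}\,\bar\varphi^{\tilde\varepsilon}_0$ while you factor out only $e^{-ct2^{2j}}$ — so that the remaining exponential still dominates any polynomial in $\tilde t$. The paper handles this explicitly by extracting $e^{-2tc(2^j\xi)^2}$ first and keeping one factor $e^{-tc(2^j\xi)^2}$ in reserve to absorb the polynomial growth coming from $\partial^\alpha l_\varepsilon$. Aside from that cosmetic fix, your argument is the paper's argument.
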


\begin{proof}
  Recall that the discrete Laplacian $\Delta_{\varepsilon}$ acts in the
  Fourier space as
  \[ \mathcal{F} (e^{- t (\Delta_{\varepsilon} - m^2)} f) (k) = e^{- t
     l_{\varepsilon} (k)} \hat{f} (k), \]
  where
  \[ l_{\varepsilon} (k) = m^2 + 4 \sum_{j}\sin^2 (\varepsilon \pi k_{j}) /
     \varepsilon^2 . \]
  Consequently, for $- 1 \leqslant j \leqslant N - J$ we have using the fact
  that $\mathcal{F}^{- 1} (g h) =\mathcal{F}_{\mathbb{R}^d}^{- 1} (g)
  \ast_{\varepsilon} \mathcal{F}^{- 1} (h)$ (where $\mathcal{F}^{- 1}$ denotes
  the inverse Fourier transform on the lattice $\Lambda_{\varepsilon}$) we
  obtain

  \[ \Delta^{\varepsilon}_j [e^{t (m^2 - \Delta_{\varepsilon})} f] = [2^{j d}
     V_j (2^j \cdummy)] \ast_{\varepsilon} \Delta^{\varepsilon}_j f, \]
  where
  \[ V_j (x) \assign \int_{\mathbb{R}^d} e^{i 2 \pi x \cdummy \xi} e^{- t
     l_{\varepsilon} (2^j \xi)} \bar{\varphi} (\xi) \mathd \xi, \]
  where $\bar{\varphi}$ is obtained by a rescaling of $\bar{\varphi}_j =
  \sum_{- 1 \leqslant i < \infty ; i \sim j} \varphi_i$. Next, for $M \in
  \mathbb{N}$ we want to show that
  \begin{equation}
    | (1 + | 2 \pi x |^2)^M V_j (x) | \lesssim e^{- t (m^2 + c 2^{2 j})},
    \qquad x \in \mathbb{R}^d . \label{eq:v}
  \end{equation}
  Indeed, with this in hand we may apply Lemma~\ref{lem:young} to deduce the
  claim.
  
  In order to show {\eqref{eq:v}} we compute
  \[ (1 + | 2 \pi x |^2)^M V_j (x) = \int_{\mathbb{R}^d} [(1 - \Delta_{\xi})^M
     e^{i 2 \pi x \cdummy \xi}] e^{- t l_{\varepsilon} (2^j \xi)}
     \bar{\varphi} (\xi) \mathd \xi \]
  \[ = \int_{\mathbb{R}^d} e^{i 2 \pi x \cdummy \xi} (1 - \Delta_{\xi})^M
     [e^{- t l_{\varepsilon} (2^j \xi)} \bar{\varphi} (\xi)] \mathd \xi \]
  where for a multiindex $\alpha \in \mathbb{N}^d$
  \[ \partial_{\xi}^{\alpha} e^{- t l_{\varepsilon} (2^j \xi)} = e^{- t
     l_{\varepsilon} (2^j \xi)} \sum_{0 \leqslant | \beta | \leqslant | \alpha
     |} c_{\alpha, \beta} \partial_{\xi}^{\beta} l_{\varepsilon} (2^j \xi) \]
  therefore using the bounds from Lemma 3.5 in {\cite{MP17}} we obtain
  \[ | \partial_{\xi}^{\alpha} e^{- t l_{\varepsilon} (2^j \xi)} | \lesssim
     e^{- t m^2} e^{- 2 t c (2^j \xi)^2} \sum_{0 \leqslant | \beta | \leqslant
     | \alpha |} \varepsilon^{(| \beta | - 2) \vee 0} (1 + | 2^j \xi |^2)
     \lesssim e^{- t m^2} e^{- t c (2^j \xi)^2} . \]
  Therefore
  \[ | (1 + | 2 \pi x |^2)^M V_j (x) | \lesssim \int_{\mathbb{R}^d} e^{- t c
     (2^j \xi)^2} \bar{\varphi} (\xi) \mathd \xi \lesssim e^{- t m^2} e^{- t c
     2^{2 j}} \]
  and {\eqref{eq:v}} is proven.
\end{proof}

\begin{lemma}
  \label{lem:reg}Let $\alpha \in \mathbb{R}$ and let $\rho$ be a weight. Let
  $v$ solve
  \[ \LL_{\varepsilon} v = f, \qquad v (0) = v_0 . \]
  Then
  \[ \| v \|_{L_T^1 B_{1, 1}^{\alpha, \varepsilon} (\rho)} \lesssim \| v_0
     \|_{B^{\alpha - 2, \varepsilon}_{1, 1} (\rho)} + \| f \|_{L^1_T B_{1,
     1}^{\alpha - 2, \varepsilon} (\rho)}, \]
  where the proportionality constant does not depend on $T$ and $\varepsilon$.
\end{lemma}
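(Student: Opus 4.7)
The plan is to combine Duhamel's formula with the semigroup bound from Lemma~\ref{lem:Pt} and the fact that $\int_0^\infty e^{-t(m^2+c2^{2j})}\,\mathd t\lesssim 2^{-2j}$. Since both $P^\varepsilon_t$ and $\Delta_j^\varepsilon$ are Fourier multipliers on $\Lambda_\varepsilon$ they commute, so for every $j\in\{-1,\ldots,N-J\}$ Duhamel gives
\[
\Delta_j^\varepsilon v(t) = P^\varepsilon_t \Delta_j^\varepsilon v_0 + \int_0^t P^\varepsilon_{t-s}\Delta_j^\varepsilon f(s)\,\mathd s.
\]
Taking the $L^{1,\varepsilon}(\rho)$ norm and applying Lemma~\ref{lem:Pt} to each summand yields
\[
\|\Delta_j^\varepsilon v(t)\|_{L^{1,\varepsilon}(\rho)} \lesssim e^{-t(m^2+c2^{2j})}\|\Delta_j^\varepsilon v_0\|_{L^{1,\varepsilon}(\rho)} + \int_0^t e^{-(t-s)(m^2+c2^{2j})}\|\Delta_j^\varepsilon f(s)\|_{L^{1,\varepsilon}(\rho)}\,\mathd s,
\]
with an implicit constant independent of $\varepsilon, T, j$.

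Next, I would integrate in $t\in[0,T]$. The first term contributes at most $(m^2+c2^{2j})^{-1}\|\Delta_j^\varepsilon v_0\|_{L^{1,\varepsilon}(\rho)}\lesssim 2^{-2j}\|\Delta_j^\varepsilon v_0\|_{L^{1,\varepsilon}(\rho)}$, uniformly in $j\geqslant -1$ (for $j=-1$ the mass $m^2>0$ provides the needed bound, absorbed into the implicit constant). For the Duhamel term, Fubini and the same time integration give
\[
\int_0^T\!\int_0^t e^{-(t-s)(m^2+c2^{2j})}\|\Delta_j^\varepsilon f(s)\|_{L^{1,\varepsilon}(\rho)}\,\mathd s\,\mathd t \lesssim 2^{-2j}\|\Delta_j^\varepsilon f\|_{L^1_T L^{1,\varepsilon}(\rho)}.
\]

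Finally, I would multiply by $2^{\alpha j}$, sum over $-1\leqslant j\leqslant N-J$, and absorb the factor $2^{-2j}$ into the weight $2^{(\alpha-2)j}$ to obtain
\[
\|v\|_{L^1_T B^{\alpha,\varepsilon}_{1,1}(\rho)} = \sum_j 2^{\alpha j}\|\Delta_j^\varepsilon v\|_{L^1_T L^{1,\varepsilon}(\rho)} \lesssim \|v_0\|_{B^{\alpha-2,\varepsilon}_{1,1}(\rho)} + \|f\|_{L^1_T B^{\alpha-2,\varepsilon}_{1,1}(\rho)},
\]
with constants uniform in $\varepsilon$ and $T$. There is no real obstacle here: the argument is a standard Schauder estimate in Besov scale and the only point requiring any care is the uniformity in $\varepsilon$, which is guaranteed by the $\varepsilon$-independent constant in Lemma~\ref{lem:Pt}.
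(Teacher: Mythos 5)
Your proposal is correct and takes essentially the same route as the paper: apply Duhamel's formula block by block, invoke Lemma~\ref{lem:Pt} for the decay $e^{-t(m^2+c2^{2j})}$, integrate in time and use $\int_0^\infty e^{-t(m^2+c2^{2j})}\,\mathd t\lesssim 2^{-2j}$ (uniformly in $j\geqslant -1$ thanks to $m^2>0$), then sum over $j$ to trade $2^{-2j}$ for the shift $\alpha\mapsto\alpha-2$. The only cosmetic difference is that you pass to $\int_0^\infty$ before Fubini, while the paper applies Fubini directly and then extends the inner integral; the content is the same.
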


\begin{proof}
  Applying the Littlewood--Paley projectors we obtain
  \[ \Delta^{\varepsilon}_j v (t) = P^{\varepsilon}_t \Delta^{\varepsilon}_j
     v_0 + \int_0^t P^{\varepsilon}_{t - s} \Delta^{\varepsilon}_j f (s)
     \mathd s. \]
  Hence according to Lemma~\ref{lem:Pt} there exists $c > 0$ such that for $-
  1 \leqslant j \leqslant N - J$ and uniformly in $T > 0$ and $\varepsilon$
  \[ \| v \|_{L_T^1 B_{1, 1}^{\alpha, \varepsilon} (\rho)} = \int_0^T \sum_{-
     1 \leqslant j \leqslant N - J} 2^{\alpha j} \| \Delta^{\varepsilon}_j v
     (t) \|_{L^{1, \varepsilon} (\rho)} \mathd t \leqslant \int_0^T \sum_{- 1
     \leqslant j \leqslant N - J} 2^{\alpha j} \| P^{\varepsilon}_t
     \Delta^{\varepsilon}_j v_0 \|_{L^{1, \varepsilon} (\rho)} \mathd t \]
  \[ + \int_0^T \sum_{- 1 \leqslant j \leqslant N - J} 2^{\alpha j} \int_0^t
     \| P^{\varepsilon}_{t - s} \Delta^{\varepsilon}_j f (s) \|_{L^{1,
     \varepsilon} (\rho)} \mathd s \mathd t \]
  \[ \leqslant \sum_{- 1 \leqslant j \leqslant N - J} 2^{\alpha j}
     \int_0^{\infty} e^{- t (m^2 + c 2^{2 j})} \mathd t \|
     \Delta^{\varepsilon}_j v_0 \|_{L^{1, \varepsilon} (\rho)} \]
  \[ + \sum_{- 1 \leqslant j \leqslant N - J} 2^{\alpha j} \int_0^T \left[
     \int_0^{\infty} e^{- (t - s) (m^2 + c 2^{2 j})} \mathd t \right] \|
     \Delta^{\varepsilon}_j f (s) \|_{L^{1, \varepsilon} (\rho)} \mathd s \]
  \[ \lesssim \sum_{- 1 \leqslant j \leqslant N - J} 2^{(\alpha - 2) j} \|
     \Delta_j v_0 \|_{L^{1, \varepsilon} (\rho)} + \sum_{- 1 \leqslant j
     \leqslant N - J} 2^{(\alpha - 2) j} \int_0^T \| \Delta^{\varepsilon}_j f
     (s) \|_{L^{1, \varepsilon} (\rho)} \mathd s \]
  \[ = \| v_0 \|_{B^{\alpha - 2, \varepsilon}_{1, 1} (\rho)} + \| f \|_{L^1_T
     B_{1, 1}^{\alpha - 2, \varepsilon} (\rho)} . \]
\end{proof}

\subsection{Regularity of $\chi_{M,\varepsilon}$}
\label{s:chi-reg}

Finally, we proceed with the proof of the proof of Proposition \ref{prop:reg}. 
\medskip

\noindent\textbf{Proof of Proposition \ref{prop:reg} }
  For notational simplicity we fix the parameter $M$ and omit the dependence
  of the various distributions on $M$ throughout the proof. In addition, the $\lambda$-dependent constants  are always bounded uniformly over  $\lambda\in [0,\lambda_{0}]$ for every  $\lambda_{0}>0$.
  
  In view of
  {\eqref{eq:chi1}} we obtain
  \[ \| \rho^{2 + \sigma} \chi_{\varepsilon} \|_{L_T^{\infty} L^{2,
     \varepsilon}} \leqslant \| \rho^2 \phi_{\varepsilon} \|_{L_T^{\infty}
     L^{2, \varepsilon}} + \| \rho^{2 + \sigma} ( 3 \lambda
     X_{\varepsilon}^{\!\resizebox{0.6em}{!}{
\begin{tikzpicture}
\pgfpathmoveto{\pgfqpoint{0cm}{0cm}}
\pgfpathlineto{\pgfqpoint{1.376cm}{0cm}}
\pgfpathlineto{\pgfqpoint{1.376cm}{1.588cm}}
\pgfpathlineto{\pgfqpoint{0cm}{1.588cm}}
\pgfpathclose
\pgfusepath{clip}
\begin{pgfscope}
\begin{pgfscope}
\pgfpathmoveto{\pgfqpoint{0cm}{0cm}}
\pgfpathlineto{\pgfqpoint{1.376cm}{0cm}}
\pgfpathlineto{\pgfqpoint{1.376cm}{1.588cm}}
\pgfpathlineto{\pgfqpoint{0cm}{1.588cm}}
\pgfpathclose
\pgfusepath{clip}
\begin{pgfscope}
\begin{pgfscope}
\definecolor{eps2pgf_color}{gray}{0.976471}\pgfsetstrokecolor{eps2pgf_color}\pgfsetfillcolor{eps2pgf_color}
\pgfpathmoveto{\pgfqpoint{0cm}{0cm}}
\pgfpathlineto{\pgfqpoint{1.376cm}{0cm}}
\pgfpathlineto{\pgfqpoint{1.376cm}{1.588cm}}
\pgfpathlineto{\pgfqpoint{0cm}{1.588cm}}
\pgfpathclose
\pgfusepath{fill}
\end{pgfscope}
\begin{pgfscope}
\pgfsetdash{}{0cm}
\pgfsetlinewidth{0.818mm}
\pgfsetroundcap
\pgfsetroundjoin
\pgfsetmiterlimit{7.0}
\definecolor{eps2pgf_color}{gray}{0}\pgfsetstrokecolor{eps2pgf_color}\pgfsetfillcolor{eps2pgf_color}
\pgfpathmoveto{\pgfqpoint{0.117cm}{1.476cm}}
\pgfpathlineto{\pgfqpoint{0.682cm}{0.726cm}}
\pgfpathlineto{\pgfqpoint{1.246cm}{1.476cm}}
\pgfusepath{stroke}
\end{pgfscope}
\definecolor{eps2pgf_color}{gray}{0}\pgfsetstrokecolor{eps2pgf_color}\pgfsetfillcolor{eps2pgf_color}
\pgfpathmoveto{\pgfqpoint{0.273cm}{1.451cm}}
\pgfpathcurveto{\pgfqpoint{0.273cm}{1.487cm}}{\pgfqpoint{0.259cm}{1.522cm}}{\pgfqpoint{0.233cm}{1.547cm}}
\pgfpathcurveto{\pgfqpoint{0.207cm}{1.573cm}}{\pgfqpoint{0.173cm}{1.588cm}}{\pgfqpoint{0.137cm}{1.588cm}}
\pgfpathcurveto{\pgfqpoint{0.1cm}{1.588cm}}{\pgfqpoint{0.066cm}{1.573cm}}{\pgfqpoint{0.04cm}{1.547cm}}
\pgfpathcurveto{\pgfqpoint{0.014cm}{1.522cm}}{\pgfqpoint{0cm}{1.487cm}}{\pgfqpoint{0cm}{1.451cm}}
\pgfpathcurveto{\pgfqpoint{0cm}{1.414cm}}{\pgfqpoint{0.014cm}{1.379cm}}{\pgfqpoint{0.04cm}{1.354cm}}
\pgfpathcurveto{\pgfqpoint{0.066cm}{1.328cm}}{\pgfqpoint{0.1cm}{1.314cm}}{\pgfqpoint{0.137cm}{1.314cm}}
\pgfpathcurveto{\pgfqpoint{0.173cm}{1.314cm}}{\pgfqpoint{0.207cm}{1.328cm}}{\pgfqpoint{0.233cm}{1.354cm}}
\pgfpathcurveto{\pgfqpoint{0.259cm}{1.379cm}}{\pgfqpoint{0.273cm}{1.414cm}}{\pgfqpoint{0.273cm}{1.451cm}}
\pgfusepath{fill}
\pgfpathmoveto{\pgfqpoint{1.345cm}{1.426cm}}
\pgfpathcurveto{\pgfqpoint{1.345cm}{1.463cm}}{\pgfqpoint{1.331cm}{1.497cm}}{\pgfqpoint{1.305cm}{1.523cm}}
\pgfpathcurveto{\pgfqpoint{1.28cm}{1.549cm}}{\pgfqpoint{1.245cm}{1.563cm}}{\pgfqpoint{1.209cm}{1.563cm}}
\pgfpathcurveto{\pgfqpoint{1.172cm}{1.563cm}}{\pgfqpoint{1.138cm}{1.549cm}}{\pgfqpoint{1.112cm}{1.523cm}}
\pgfpathcurveto{\pgfqpoint{1.087cm}{1.497cm}}{\pgfqpoint{1.072cm}{1.463cm}}{\pgfqpoint{1.072cm}{1.426cm}}
\pgfpathcurveto{\pgfqpoint{1.072cm}{1.39cm}}{\pgfqpoint{1.087cm}{1.355cm}}{\pgfqpoint{1.112cm}{1.329cm}}
\pgfpathcurveto{\pgfqpoint{1.138cm}{1.304cm}}{\pgfqpoint{1.172cm}{1.289cm}}{\pgfqpoint{1.209cm}{1.289cm}}
\pgfpathcurveto{\pgfqpoint{1.245cm}{1.289cm}}{\pgfqpoint{1.28cm}{1.304cm}}{\pgfqpoint{1.305cm}{1.329cm}}
\pgfpathcurveto{\pgfqpoint{1.331cm}{1.355cm}}{\pgfqpoint{1.345cm}{1.39cm}}{\pgfqpoint{1.345cm}{1.426cm}}
\pgfusepath{fill}
\begin{pgfscope}
\pgfsetdash{}{0cm}
\pgfsetlinewidth{0.818mm}
\pgfsetroundcap
\pgfsetmiterlimit{4.0}
\pgfpathmoveto{\pgfqpoint{0.682cm}{0.726cm}}
\pgfpathlineto{\pgfqpoint{0.682cm}{0.097cm}}
\pgfusepath{stroke}
\end{pgfscope}
\end{pgfscope}
\end{pgfscope}
\end{pgfscope}
\end{tikzpicture}}} \succ \phi_{\varepsilon} )
     \|_{L_T^{\infty} L^{2, \varepsilon}} \leqslant C_{\lambda}\| \rho^2
     \phi_{\varepsilon} \|_{L_T^{\infty} L^{2, \varepsilon}} Q_{\rho}
     (\mathbb{X}_{\varepsilon}), \]
  where, by Theorem~\ref{th:energy-estimate},
  \begin{equation*}
  \begin{aligned}
  \| \rho^2 \phi_{\varepsilon} (t) \|_{L^{2, \varepsilon}}^2 & \leqslant 
     C_{t,\lambda} Q_{\rho} (\mathbb{X}_{\varepsilon})
     + \| \rho^2 \phi_{ \varepsilon} (0)\|_{L^{2, \varepsilon}}^2.
  \end{aligned}
  \end{equation*}
  Thus
  \begin{equation}
    \| \rho^{2 + \sigma} \chi_{\varepsilon} \|_{L_T^{\infty} L^{2,
    \varepsilon}} \leqslant C_{T,\lambda}Q_{\rho} (\mathbb{X}_{\varepsilon}) (1  +\|
    \rho^2 \phi_{\varepsilon} (0)\|_{L^{2, \varepsilon}}) . \label{eq:chi}
  \end{equation}
  Next, we intend to apply Lemma~\ref{lem:reg} to {\eqref{eq:chi11}} in the
  form
  \[ \| \rho^4 \chi_{\varepsilon} \|_{L^1_T B_{1, 1}^{1 + 3 \kappa,
     \varepsilon}} \lesssim \| \rho^4 \chi_{\varepsilon} (0) \|_{B_{1, 1}^{- 1
     + 3 \kappa, \varepsilon}} + \left\| \rho^4 \LL_{\varepsilon}
     \chi_{\varepsilon} \right\|_{L_T^1 B_{1, 1}^{- 1 + 3 \kappa,
     \varepsilon}}. \]
In view of
  the second term on the  right hand side of {\eqref{eq:chi11}} we shall
  therefore estimate $U_{\varepsilon}$ in $B_{1, 1}^{- 1 + 3 \kappa,
  \varepsilon} (\rho^{4 - \sigma})$ as the weight $\rho^{\sigma}$ will be lost
  to control $X^{\!\resizebox{0.6em}{!}{
\begin{tikzpicture}
\pgfpathmoveto{\pgfqpoint{0cm}{0cm}}
\pgfpathlineto{\pgfqpoint{1.376cm}{0cm}}
\pgfpathlineto{\pgfqpoint{1.376cm}{1.588cm}}
\pgfpathlineto{\pgfqpoint{0cm}{1.588cm}}
\pgfpathclose
\pgfusepath{clip}
\begin{pgfscope}
\begin{pgfscope}
\pgfpathmoveto{\pgfqpoint{0cm}{0cm}}
\pgfpathlineto{\pgfqpoint{1.376cm}{0cm}}
\pgfpathlineto{\pgfqpoint{1.376cm}{1.588cm}}
\pgfpathlineto{\pgfqpoint{0cm}{1.588cm}}
\pgfpathclose
\pgfusepath{clip}
\begin{pgfscope}
\begin{pgfscope}
\definecolor{eps2pgf_color}{gray}{0.976471}\pgfsetstrokecolor{eps2pgf_color}\pgfsetfillcolor{eps2pgf_color}
\pgfpathmoveto{\pgfqpoint{0cm}{0cm}}
\pgfpathlineto{\pgfqpoint{1.376cm}{0cm}}
\pgfpathlineto{\pgfqpoint{1.376cm}{1.588cm}}
\pgfpathlineto{\pgfqpoint{0cm}{1.588cm}}
\pgfpathclose
\pgfusepath{fill}
\end{pgfscope}
\begin{pgfscope}
\pgfsetdash{}{0cm}
\pgfsetlinewidth{0.818mm}
\pgfsetroundcap
\pgfsetroundjoin
\pgfsetmiterlimit{7.0}
\definecolor{eps2pgf_color}{gray}{0}\pgfsetstrokecolor{eps2pgf_color}\pgfsetfillcolor{eps2pgf_color}
\pgfpathmoveto{\pgfqpoint{0.117cm}{1.476cm}}
\pgfpathlineto{\pgfqpoint{0.682cm}{0.726cm}}
\pgfpathlineto{\pgfqpoint{1.246cm}{1.476cm}}
\pgfusepath{stroke}
\end{pgfscope}
\definecolor{eps2pgf_color}{gray}{0}\pgfsetstrokecolor{eps2pgf_color}\pgfsetfillcolor{eps2pgf_color}
\pgfpathmoveto{\pgfqpoint{0.273cm}{1.451cm}}
\pgfpathcurveto{\pgfqpoint{0.273cm}{1.487cm}}{\pgfqpoint{0.259cm}{1.522cm}}{\pgfqpoint{0.233cm}{1.547cm}}
\pgfpathcurveto{\pgfqpoint{0.207cm}{1.573cm}}{\pgfqpoint{0.173cm}{1.588cm}}{\pgfqpoint{0.137cm}{1.588cm}}
\pgfpathcurveto{\pgfqpoint{0.1cm}{1.588cm}}{\pgfqpoint{0.066cm}{1.573cm}}{\pgfqpoint{0.04cm}{1.547cm}}
\pgfpathcurveto{\pgfqpoint{0.014cm}{1.522cm}}{\pgfqpoint{0cm}{1.487cm}}{\pgfqpoint{0cm}{1.451cm}}
\pgfpathcurveto{\pgfqpoint{0cm}{1.414cm}}{\pgfqpoint{0.014cm}{1.379cm}}{\pgfqpoint{0.04cm}{1.354cm}}
\pgfpathcurveto{\pgfqpoint{0.066cm}{1.328cm}}{\pgfqpoint{0.1cm}{1.314cm}}{\pgfqpoint{0.137cm}{1.314cm}}
\pgfpathcurveto{\pgfqpoint{0.173cm}{1.314cm}}{\pgfqpoint{0.207cm}{1.328cm}}{\pgfqpoint{0.233cm}{1.354cm}}
\pgfpathcurveto{\pgfqpoint{0.259cm}{1.379cm}}{\pgfqpoint{0.273cm}{1.414cm}}{\pgfqpoint{0.273cm}{1.451cm}}
\pgfusepath{fill}
\pgfpathmoveto{\pgfqpoint{1.345cm}{1.426cm}}
\pgfpathcurveto{\pgfqpoint{1.345cm}{1.463cm}}{\pgfqpoint{1.331cm}{1.497cm}}{\pgfqpoint{1.305cm}{1.523cm}}
\pgfpathcurveto{\pgfqpoint{1.28cm}{1.549cm}}{\pgfqpoint{1.245cm}{1.563cm}}{\pgfqpoint{1.209cm}{1.563cm}}
\pgfpathcurveto{\pgfqpoint{1.172cm}{1.563cm}}{\pgfqpoint{1.138cm}{1.549cm}}{\pgfqpoint{1.112cm}{1.523cm}}
\pgfpathcurveto{\pgfqpoint{1.087cm}{1.497cm}}{\pgfqpoint{1.072cm}{1.463cm}}{\pgfqpoint{1.072cm}{1.426cm}}
\pgfpathcurveto{\pgfqpoint{1.072cm}{1.39cm}}{\pgfqpoint{1.087cm}{1.355cm}}{\pgfqpoint{1.112cm}{1.329cm}}
\pgfpathcurveto{\pgfqpoint{1.138cm}{1.304cm}}{\pgfqpoint{1.172cm}{1.289cm}}{\pgfqpoint{1.209cm}{1.289cm}}
\pgfpathcurveto{\pgfqpoint{1.245cm}{1.289cm}}{\pgfqpoint{1.28cm}{1.304cm}}{\pgfqpoint{1.305cm}{1.329cm}}
\pgfpathcurveto{\pgfqpoint{1.331cm}{1.355cm}}{\pgfqpoint{1.345cm}{1.39cm}}{\pgfqpoint{1.345cm}{1.426cm}}
\pgfusepath{fill}
\begin{pgfscope}
\pgfsetdash{}{0cm}
\pgfsetlinewidth{0.818mm}
\pgfsetroundcap
\pgfsetmiterlimit{4.0}
\pgfpathmoveto{\pgfqpoint{0.682cm}{0.726cm}}
\pgfpathlineto{\pgfqpoint{0.682cm}{0.097cm}}
\pgfusepath{stroke}
\end{pgfscope}
\end{pgfscope}
\end{pgfscope}
\end{pgfscope}
\end{tikzpicture}}}$. Let us first show how to bound the terms that
  contain higher powers of $\phi$, all the other terms being straightforward.
  By paraproduct estimates Lemma~\ref{lem:mult} and Lemma~\ref{lem:15}, we
  obtain
  \[ \| \rho^{4 - \sigma} \lambda  X_{\varepsilon} \phi_{\varepsilon}^2 \|_{B_{1,
     1}^{- 1 + 3 \kappa, \varepsilon}} \lesssim \lambda  \| \rho^{\sigma}
     X_{\varepsilon} \|_{\CC^{- 1 / 2 - \kappa, \varepsilon}} \| \rho^{4 - 2
     \sigma} \phi_{\varepsilon}^2 \|^{}_{B_{1, 1}^{1 / 2 + 2 \kappa,
     \varepsilon}} \]
  \[ \lesssim \lambda  \| \rho^{\sigma} X_{\varepsilon} \|_{\CC^{- 1 / 2 - \kappa,
     \varepsilon}} \| \rho^{1 + \iota} \phi_{\varepsilon} \|_{L^{2,
     \varepsilon}} \| \rho^2 \phi_{\varepsilon} \|_{H^{1 / 2 + 3 \kappa,
     \varepsilon}} \leqslant \lambda  Q_{\rho} (\mathbb{X}_{\varepsilon}) \| \rho
     \phi_{\varepsilon} \|_{L^{4, \varepsilon}} \| \rho^2 \phi_{\varepsilon}
     \|_{H^{1 - 2 \kappa, \varepsilon}} \]
  while   
  \[ \| \rho^{4 - \sigma} 3\lambda  Y_{\varepsilon} \phi_{\varepsilon}^2 \|_{B_{1,
     1}^{- 1 + 3 \kappa, \varepsilon}} \lesssim \lambda  \| \rho^{\sigma}
     Y_{\varepsilon} \|_{\CC^{1 / 2 - \kappa, \varepsilon}} \| \rho^{4 - 2
     \sigma} \phi_{\varepsilon}^2 \|_{B^{\kappa, \varepsilon}_{1, 1}} \]
  \[ \lesssim \lambda \| \rho^{\sigma} Y_{\varepsilon} \|_{\CC^{1 / 2 - \kappa,
     \varepsilon}} \| \rho^{1 + \iota} \phi_{\varepsilon} \|_{L^{2,
     \varepsilon}} \| \rho^2 \phi_{\varepsilon} \|_{H^{2 \kappa, \varepsilon}}
     \leqslant \lambda^2 Q_{\rho} (\mathbb{X}_{\varepsilon}) \| \rho \phi_{\varepsilon}
     \|_{L^{4, \varepsilon}} \| \rho^2 \phi_{\varepsilon} \|_{H^{1 - 2 \kappa,
     \varepsilon}}, \]
  and by interpolation for $\theta = \frac{1 - 4 \kappa}{1 - 2 \kappa}$
  \[ \| \rho^{4 - \sigma} \lambda \phi_{\varepsilon}^3 \|_{B_{1, 1}^{- 1 + 3 \kappa,
     \varepsilon}} \lesssim \lambda \| \rho^{4 - \sigma} \phi_{\varepsilon}^3
     \|_{B_{1, 1}^{\kappa, \varepsilon}} \lesssim \lambda  \| \rho \phi_{\varepsilon}
     \|_{L^{4, \varepsilon}}^2 \| \rho^{2 - \sigma} \phi_{\varepsilon}
     \|_{H^{2 \kappa, \varepsilon}} \]
  \[ \lesssim \lambda  \| \rho \phi_{\varepsilon} \|_{L^{4, \varepsilon}}^2 \| \rho^{1
     + \iota} \phi_{\varepsilon} \|_{L^{2, \varepsilon}}^{\theta} \| \rho^2
     \phi_{\varepsilon} \|_{H^{1 - 2 \kappa, \varepsilon}}^{1 - \theta}
     \lesssim\lambda  \| \rho \phi_{\varepsilon} \|_{L^{4, \varepsilon}}^{2 + \theta}
     \| \rho^2 \phi_{\varepsilon} \|_{H^{1 - 2 \kappa, \varepsilon}}^{1 -
     \theta} . \]
  Consequently, we use the embeddings $B_{2, 2}^{\alpha + \kappa, \varepsilon}
  (\rho^{2 + \beta}) \subset B_{1, 1}^{\alpha, \varepsilon} (\rho^{4 -
  \sigma})$ and $B_{\infty, \infty}^{\alpha + \kappa, \varepsilon}
  (\rho^{\beta}) \subset B_{1, 1}^{\alpha, \varepsilon} (\rho^{4 - \sigma})$
  for $\alpha \in \mathbb{R}$ (provided the weight possesses enough
  integrability and $\beta, \sigma > 0$ are sufficiently small). We deduce
  \[ \begin{aligned}
       \| \rho^{4 - \sigma} U_{\varepsilon} \|_{B_{1, 1}^{- 1 + 3 \kappa,
       \varepsilon}} & \lesssim  \lambda^2  \| \rho^{\sigma}
       \tilde{X}_{\varepsilon}^{\!\resizebox{!}{.8em}{
\begin{tikzpicture}
\pgfpathmoveto{\pgfqpoint{0cm}{-0.035cm}}
\pgfpathlineto{\pgfqpoint{1.976cm}{-0.035cm}}
\pgfpathlineto{\pgfqpoint{1.976cm}{1.94cm}}
\pgfpathlineto{\pgfqpoint{0cm}{1.94cm}}
\pgfpathclose
\pgfusepath{clip}
\begin{pgfscope}
\begin{pgfscope}
\pgfpathmoveto{\pgfqpoint{0cm}{-0.035cm}}
\pgfpathlineto{\pgfqpoint{1.976cm}{-0.035cm}}
\pgfpathlineto{\pgfqpoint{1.976cm}{1.94cm}}
\pgfpathlineto{\pgfqpoint{0cm}{1.94cm}}
\pgfpathclose
\pgfusepath{clip}
\begin{pgfscope}
\begin{pgfscope}
\pgfsetdash{}{0cm}
\pgfsetlinewidth{0.818mm}
\pgfsetroundcap
\pgfsetroundjoin
\pgfsetmiterlimit{7.0}
\definecolor{eps2pgf_color}{gray}{0}\pgfsetstrokecolor{eps2pgf_color}\pgfsetfillcolor{eps2pgf_color}
\pgfpathmoveto{\pgfqpoint{0.117cm}{1.815cm}}
\pgfpathlineto{\pgfqpoint{0.682cm}{1.065cm}}
\pgfpathlineto{\pgfqpoint{1.246cm}{1.815cm}}
\pgfusepath{stroke}
\end{pgfscope}
\definecolor{eps2pgf_color}{gray}{0}\pgfsetstrokecolor{eps2pgf_color}\pgfsetfillcolor{eps2pgf_color}
\pgfpathmoveto{\pgfqpoint{0.273cm}{1.789cm}}
\pgfpathcurveto{\pgfqpoint{0.273cm}{1.825cm}}{\pgfqpoint{0.259cm}{1.86cm}}{\pgfqpoint{0.233cm}{1.886cm}}
\pgfpathcurveto{\pgfqpoint{0.207cm}{1.912cm}}{\pgfqpoint{0.173cm}{1.926cm}}{\pgfqpoint{0.137cm}{1.926cm}}
\pgfpathcurveto{\pgfqpoint{0.1cm}{1.926cm}}{\pgfqpoint{0.066cm}{1.912cm}}{\pgfqpoint{0.04cm}{1.886cm}}
\pgfpathcurveto{\pgfqpoint{0.014cm}{1.86cm}}{\pgfqpoint{0cm}{1.825cm}}{\pgfqpoint{0cm}{1.789cm}}
\pgfpathcurveto{\pgfqpoint{0cm}{1.753cm}}{\pgfqpoint{0.014cm}{1.718cm}}{\pgfqpoint{0.04cm}{1.692cm}}
\pgfpathcurveto{\pgfqpoint{0.066cm}{1.667cm}}{\pgfqpoint{0.1cm}{1.652cm}}{\pgfqpoint{0.137cm}{1.652cm}}
\pgfpathcurveto{\pgfqpoint{0.173cm}{1.652cm}}{\pgfqpoint{0.207cm}{1.667cm}}{\pgfqpoint{0.233cm}{1.692cm}}
\pgfpathcurveto{\pgfqpoint{0.259cm}{1.718cm}}{\pgfqpoint{0.273cm}{1.753cm}}{\pgfqpoint{0.273cm}{1.789cm}}
\pgfusepath{fill}
\pgfpathmoveto{\pgfqpoint{1.345cm}{1.765cm}}
\pgfpathcurveto{\pgfqpoint{1.345cm}{1.801cm}}{\pgfqpoint{1.331cm}{1.836cm}}{\pgfqpoint{1.305cm}{1.862cm}}
\pgfpathcurveto{\pgfqpoint{1.28cm}{1.887cm}}{\pgfqpoint{1.245cm}{1.902cm}}{\pgfqpoint{1.209cm}{1.902cm}}
\pgfpathcurveto{\pgfqpoint{1.172cm}{1.902cm}}{\pgfqpoint{1.138cm}{1.887cm}}{\pgfqpoint{1.112cm}{1.862cm}}
\pgfpathcurveto{\pgfqpoint{1.087cm}{1.836cm}}{\pgfqpoint{1.072cm}{1.801cm}}{\pgfqpoint{1.072cm}{1.765cm}}
\pgfpathcurveto{\pgfqpoint{1.072cm}{1.728cm}}{\pgfqpoint{1.087cm}{1.694cm}}{\pgfqpoint{1.112cm}{1.668cm}}
\pgfpathcurveto{\pgfqpoint{1.138cm}{1.642cm}}{\pgfqpoint{1.172cm}{1.628cm}}{\pgfqpoint{1.209cm}{1.628cm}}
\pgfpathcurveto{\pgfqpoint{1.245cm}{1.628cm}}{\pgfqpoint{1.28cm}{1.642cm}}{\pgfqpoint{1.305cm}{1.668cm}}
\pgfpathcurveto{\pgfqpoint{1.331cm}{1.694cm}}{\pgfqpoint{1.345cm}{1.728cm}}{\pgfqpoint{1.345cm}{1.765cm}}
\pgfusepath{fill}
\begin{pgfscope}
\pgfsetdash{}{0cm}
\pgfsetlinewidth{0.818mm}
\pgfsetroundcap
\pgfsetroundjoin
\pgfsetmiterlimit{7.0}
\pgfpathmoveto{\pgfqpoint{0.682cm}{1.065cm}}
\pgfpathlineto{\pgfqpoint{1.246cm}{0.315cm}}
\pgfpathlineto{\pgfqpoint{1.811cm}{1.065cm}}
\pgfusepath{stroke}
\end{pgfscope}
\pgfpathmoveto{\pgfqpoint{1.948cm}{1.065cm}}
\pgfpathcurveto{\pgfqpoint{1.948cm}{1.101cm}}{\pgfqpoint{1.933cm}{1.136cm}}{\pgfqpoint{1.907cm}{1.162cm}}
\pgfpathcurveto{\pgfqpoint{1.882cm}{1.187cm}}{\pgfqpoint{1.847cm}{1.202cm}}{\pgfqpoint{1.811cm}{1.202cm}}
\pgfpathcurveto{\pgfqpoint{1.775cm}{1.202cm}}{\pgfqpoint{1.74cm}{1.187cm}}{\pgfqpoint{1.714cm}{1.162cm}}
\pgfpathcurveto{\pgfqpoint{1.689cm}{1.136cm}}{\pgfqpoint{1.674cm}{1.101cm}}{\pgfqpoint{1.674cm}{1.065cm}}
\pgfpathcurveto{\pgfqpoint{1.674cm}{1.029cm}}{\pgfqpoint{1.689cm}{0.994cm}}{\pgfqpoint{1.714cm}{0.968cm}}
\pgfpathcurveto{\pgfqpoint{1.74cm}{0.942cm}}{\pgfqpoint{1.775cm}{0.928cm}}{\pgfqpoint{1.811cm}{0.928cm}}
\pgfpathcurveto{\pgfqpoint{1.847cm}{0.928cm}}{\pgfqpoint{1.882cm}{0.942cm}}{\pgfqpoint{1.907cm}{0.968cm}}
\pgfpathcurveto{\pgfqpoint{1.933cm}{0.994cm}}{\pgfqpoint{1.948cm}{1.029cm}}{\pgfqpoint{1.948cm}{1.065cm}}
\pgfusepath{fill}
\begin{pgfscope}
\pgfsetdash{}{0cm}
\pgfsetlinewidth{0.818mm}
\pgfsetmiterlimit{7.0}
\pgfpathmoveto{\pgfqpoint{1.246cm}{0.315cm}}
\pgfpathlineto{\pgfqpoint{1.244cm}{1.061cm}}
\pgfusepath{stroke}
\end{pgfscope}
\pgfpathmoveto{\pgfqpoint{1.38cm}{1.065cm}}
\pgfpathcurveto{\pgfqpoint{1.38cm}{1.101cm}}{\pgfqpoint{1.366cm}{1.136cm}}{\pgfqpoint{1.34cm}{1.162cm}}
\pgfpathcurveto{\pgfqpoint{1.315cm}{1.187cm}}{\pgfqpoint{1.28cm}{1.202cm}}{\pgfqpoint{1.244cm}{1.202cm}}
\pgfpathcurveto{\pgfqpoint{1.207cm}{1.202cm}}{\pgfqpoint{1.173cm}{1.187cm}}{\pgfqpoint{1.147cm}{1.162cm}}
\pgfpathcurveto{\pgfqpoint{1.121cm}{1.136cm}}{\pgfqpoint{1.107cm}{1.101cm}}{\pgfqpoint{1.107cm}{1.065cm}}
\pgfpathcurveto{\pgfqpoint{1.107cm}{1.029cm}}{\pgfqpoint{1.121cm}{0.994cm}}{\pgfqpoint{1.147cm}{0.968cm}}
\pgfpathcurveto{\pgfqpoint{1.173cm}{0.942cm}}{\pgfqpoint{1.207cm}{0.928cm}}{\pgfqpoint{1.244cm}{0.928cm}}
\pgfpathcurveto{\pgfqpoint{1.28cm}{0.928cm}}{\pgfqpoint{1.315cm}{0.942cm}}{\pgfqpoint{1.34cm}{0.968cm}}
\pgfpathcurveto{\pgfqpoint{1.366cm}{0.994cm}}{\pgfqpoint{1.38cm}{1.029cm}}{\pgfqpoint{1.38cm}{1.065cm}}
\pgfusepath{fill}
\begin{pgfscope}
\pgfsetdash{}{0cm}
\pgfsetlinewidth{0.818mm}
\pgfsetmiterlimit{4.0}
\pgfpathmoveto{\pgfqpoint{1.383cm}{0.178cm}}
\pgfpathcurveto{\pgfqpoint{1.383cm}{0.214cm}}{\pgfqpoint{1.369cm}{0.249cm}}{\pgfqpoint{1.343cm}{0.275cm}}
\pgfpathcurveto{\pgfqpoint{1.317cm}{0.3cm}}{\pgfqpoint{1.283cm}{0.315cm}}{\pgfqpoint{1.246cm}{0.315cm}}
\pgfpathcurveto{\pgfqpoint{1.21cm}{0.315cm}}{\pgfqpoint{1.175cm}{0.3cm}}{\pgfqpoint{1.15cm}{0.275cm}}
\pgfpathcurveto{\pgfqpoint{1.124cm}{0.249cm}}{\pgfqpoint{1.11cm}{0.214cm}}{\pgfqpoint{1.11cm}{0.178cm}}
\pgfpathcurveto{\pgfqpoint{1.11cm}{0.141cm}}{\pgfqpoint{1.124cm}{0.107cm}}{\pgfqpoint{1.15cm}{0.081cm}}
\pgfpathcurveto{\pgfqpoint{1.175cm}{0.055cm}}{\pgfqpoint{1.21cm}{0.041cm}}{\pgfqpoint{1.246cm}{0.041cm}}
\pgfpathcurveto{\pgfqpoint{1.283cm}{0.041cm}}{\pgfqpoint{1.317cm}{0.055cm}}{\pgfqpoint{1.343cm}{0.081cm}}
\pgfpathcurveto{\pgfqpoint{1.369cm}{0.107cm}}{\pgfqpoint{1.383cm}{0.141cm}}{\pgfqpoint{1.383cm}{0.178cm}}
\pgfusepath{stroke}
\end{pgfscope}
\end{pgfscope}
\end{pgfscope}
\end{pgfscope}
\end{tikzpicture}}} \|_{\CC^{- \kappa, \varepsilon}} \| \rho^2
       \phi_{\varepsilon} \|_{H^{1 - 2 \kappa, \varepsilon}} + \lambda^2 | \log t | \|
       \rho^2 \phi_{\varepsilon} \|_{H^{1 - 2 \kappa, \varepsilon}}\\
       &  \quad + \lambda^2 \| \rho^{\sigma} \llbracket X_{\varepsilon}^2 \rrbracket
       \|_{\CC^{- 1 - \kappa, \varepsilon}} \| \rho^{\sigma}
       X_{\varepsilon}^{\!\resizebox{0.6em}{!}{
\begin{tikzpicture}
\pgfpathmoveto{\pgfqpoint{0cm}{0cm}}
\pgfpathlineto{\pgfqpoint{1.376cm}{0cm}}
\pgfpathlineto{\pgfqpoint{1.376cm}{1.588cm}}
\pgfpathlineto{\pgfqpoint{0cm}{1.588cm}}
\pgfpathclose
\pgfusepath{clip}
\begin{pgfscope}
\begin{pgfscope}
\pgfpathmoveto{\pgfqpoint{0cm}{0cm}}
\pgfpathlineto{\pgfqpoint{1.376cm}{0cm}}
\pgfpathlineto{\pgfqpoint{1.376cm}{1.588cm}}
\pgfpathlineto{\pgfqpoint{0cm}{1.588cm}}
\pgfpathclose
\pgfusepath{clip}
\begin{pgfscope}
\begin{pgfscope}
\definecolor{eps2pgf_color}{gray}{0.976471}\pgfsetstrokecolor{eps2pgf_color}\pgfsetfillcolor{eps2pgf_color}
\pgfpathmoveto{\pgfqpoint{0cm}{0cm}}
\pgfpathlineto{\pgfqpoint{1.376cm}{0cm}}
\pgfpathlineto{\pgfqpoint{1.376cm}{1.588cm}}
\pgfpathlineto{\pgfqpoint{0cm}{1.588cm}}
\pgfpathclose
\pgfusepath{fill}
\end{pgfscope}
\begin{pgfscope}
\pgfsetdash{}{0cm}
\pgfsetlinewidth{0.818mm}
\pgfsetroundcap
\pgfsetroundjoin
\pgfsetmiterlimit{7.0}
\definecolor{eps2pgf_color}{gray}{0}\pgfsetstrokecolor{eps2pgf_color}\pgfsetfillcolor{eps2pgf_color}
\pgfpathmoveto{\pgfqpoint{0.117cm}{1.476cm}}
\pgfpathlineto{\pgfqpoint{0.682cm}{0.726cm}}
\pgfpathlineto{\pgfqpoint{1.246cm}{1.476cm}}
\pgfusepath{stroke}
\end{pgfscope}
\definecolor{eps2pgf_color}{gray}{0}\pgfsetstrokecolor{eps2pgf_color}\pgfsetfillcolor{eps2pgf_color}
\pgfpathmoveto{\pgfqpoint{0.273cm}{1.451cm}}
\pgfpathcurveto{\pgfqpoint{0.273cm}{1.487cm}}{\pgfqpoint{0.259cm}{1.522cm}}{\pgfqpoint{0.233cm}{1.547cm}}
\pgfpathcurveto{\pgfqpoint{0.207cm}{1.573cm}}{\pgfqpoint{0.173cm}{1.588cm}}{\pgfqpoint{0.137cm}{1.588cm}}
\pgfpathcurveto{\pgfqpoint{0.1cm}{1.588cm}}{\pgfqpoint{0.066cm}{1.573cm}}{\pgfqpoint{0.04cm}{1.547cm}}
\pgfpathcurveto{\pgfqpoint{0.014cm}{1.522cm}}{\pgfqpoint{0cm}{1.487cm}}{\pgfqpoint{0cm}{1.451cm}}
\pgfpathcurveto{\pgfqpoint{0cm}{1.414cm}}{\pgfqpoint{0.014cm}{1.379cm}}{\pgfqpoint{0.04cm}{1.354cm}}
\pgfpathcurveto{\pgfqpoint{0.066cm}{1.328cm}}{\pgfqpoint{0.1cm}{1.314cm}}{\pgfqpoint{0.137cm}{1.314cm}}
\pgfpathcurveto{\pgfqpoint{0.173cm}{1.314cm}}{\pgfqpoint{0.207cm}{1.328cm}}{\pgfqpoint{0.233cm}{1.354cm}}
\pgfpathcurveto{\pgfqpoint{0.259cm}{1.379cm}}{\pgfqpoint{0.273cm}{1.414cm}}{\pgfqpoint{0.273cm}{1.451cm}}
\pgfusepath{fill}
\pgfpathmoveto{\pgfqpoint{1.345cm}{1.426cm}}
\pgfpathcurveto{\pgfqpoint{1.345cm}{1.463cm}}{\pgfqpoint{1.331cm}{1.497cm}}{\pgfqpoint{1.305cm}{1.523cm}}
\pgfpathcurveto{\pgfqpoint{1.28cm}{1.549cm}}{\pgfqpoint{1.245cm}{1.563cm}}{\pgfqpoint{1.209cm}{1.563cm}}
\pgfpathcurveto{\pgfqpoint{1.172cm}{1.563cm}}{\pgfqpoint{1.138cm}{1.549cm}}{\pgfqpoint{1.112cm}{1.523cm}}
\pgfpathcurveto{\pgfqpoint{1.087cm}{1.497cm}}{\pgfqpoint{1.072cm}{1.463cm}}{\pgfqpoint{1.072cm}{1.426cm}}
\pgfpathcurveto{\pgfqpoint{1.072cm}{1.39cm}}{\pgfqpoint{1.087cm}{1.355cm}}{\pgfqpoint{1.112cm}{1.329cm}}
\pgfpathcurveto{\pgfqpoint{1.138cm}{1.304cm}}{\pgfqpoint{1.172cm}{1.289cm}}{\pgfqpoint{1.209cm}{1.289cm}}
\pgfpathcurveto{\pgfqpoint{1.245cm}{1.289cm}}{\pgfqpoint{1.28cm}{1.304cm}}{\pgfqpoint{1.305cm}{1.329cm}}
\pgfpathcurveto{\pgfqpoint{1.331cm}{1.355cm}}{\pgfqpoint{1.345cm}{1.39cm}}{\pgfqpoint{1.345cm}{1.426cm}}
\pgfusepath{fill}
\begin{pgfscope}
\pgfsetdash{}{0cm}
\pgfsetlinewidth{0.818mm}
\pgfsetroundcap
\pgfsetmiterlimit{4.0}
\pgfpathmoveto{\pgfqpoint{0.682cm}{0.726cm}}
\pgfpathlineto{\pgfqpoint{0.682cm}{0.097cm}}
\pgfusepath{stroke}
\end{pgfscope}
\end{pgfscope}
\end{pgfscope}
\end{pgfscope}
\end{tikzpicture}}} \|_{\CC^{1 - \kappa, \varepsilon}} \| \rho^2
       \phi_{\varepsilon} \|_{H^{1 - 2 \kappa, \varepsilon}}\\
       & \quad  + \lambda \| \rho^{\sigma} \llbracket X_{\varepsilon}^2 \rrbracket
       \|_{\CC^{- 1 - \kappa, \varepsilon}} \| \rho^{4 - 2 \sigma}
       \chi_{\varepsilon} \|_{B_{1, 1}^{1 + 2 \kappa, \varepsilon}} + \lambda^2 \|
       \rho^{\sigma} Z_{\varepsilon} \|_{\CC^{- 1 / 2 - \kappa,
       \varepsilon}}\\
       & \quad +\lambda \| \rho^{\sigma} \llbracket X_{\varepsilon}^2 \rrbracket
       \|_{\CC^{- 1 - \kappa, \varepsilon}} \left( \| \rho^{\sigma}
       Y_{\varepsilon} \|_{\CC^{1 / 2 - \kappa, \varepsilon}} + \| \rho^2
       \phi_{\varepsilon} \|_{H^{1 - 2 \kappa, \varepsilon}} \right)\\
       &  \quad + \lambda (1+\lambda \| \rho^{\sigma} \llbracket X_{\varepsilon}^2 \rrbracket
       \|_{\CC^{- 1 - \kappa, \varepsilon}} ) \| \rho^{\sigma} \llbracket X_{\varepsilon}^2 \rrbracket
       \|_{\CC^{- 1 - \kappa, \varepsilon}} \| \rho^{\sigma} Y_{\varepsilon}
       \|_{\CC^{1 / 2 - \kappa, \varepsilon}} \\
       & \quad +\lambda \| \rho^{\sigma}
       X_{\varepsilon} Y_{\varepsilon}^2 \|_{\CC^{- 1 / 2 - \kappa,
       \varepsilon}}  + \lambda\| \rho^{\sigma} X_{\varepsilon} Y_{\varepsilon} \|_{\CC^{- 1 /
       2 - \kappa, \varepsilon}} \| \rho^2 \phi_{\varepsilon} \|_{H^{1 - 2
       \kappa, \varepsilon}}\\
       & \quad  + \lambda\| \rho^{\sigma} X_{\varepsilon} \|_{\CC^{- 1 / 2 - \kappa,
       \varepsilon}} \| \rho \phi_{\varepsilon} \|_{L^{4, \varepsilon}} \|
       \rho^2 \phi_{\varepsilon} \|_{H^{1 - 2 \kappa, \varepsilon}} +\lambda \|
       \rho^{\sigma} Y_{\varepsilon} \|_{\CC^{1 / 2 - \kappa,
       \varepsilon}}^3\\
       &  \quad +\lambda \| \rho^{\sigma} Y_{\varepsilon} \|^2_{\CC^{1 / 2 - \kappa,
       \varepsilon}} \| \rho \phi_{\varepsilon} \|_{L^{4, \varepsilon}} +\lambda \|
       \rho^{\sigma} Y_{\varepsilon} \|_{\CC^{1 / 2 - \kappa, \varepsilon}} \|
       \rho \phi_{\varepsilon} \|_{L^{4, \varepsilon}} \| \rho^2
       \phi_{\varepsilon} \|_{H^{1 - 2 \kappa, \varepsilon}}\\
       & \quad  +\lambda \| \rho \phi_{\varepsilon} \|_{L^{4, \varepsilon}}^{2 + \theta}
       \| \rho^2 \phi_{\varepsilon} \|^{1 - \theta}_{H^{1 - 2 \kappa,
       \varepsilon}}
       \end{aligned}\]
       \[\begin{aligned}
       & \leqslant  | \log t | (\lambda^3 Q_{\rho} (\mathbb{X}_{\varepsilon}) +\lambda^2 \|
       \rho^2 \phi_{\varepsilon} \|_{H^{1 - 2 \kappa, \varepsilon}})  +Q_{\rho} (\mathbb{X}_{\varepsilon}) (\lambda^2 + \lambda^4) \\
       & \quad  +  (\lambda+\lambda^2) Q_{\rho} (\mathbb{X}_{\varepsilon}) (\| \rho^2
       \phi_{\varepsilon} \|_{H^{1 - 2 \kappa, \varepsilon}} + \| \rho^{4 - 2
       \sigma} \chi_{\varepsilon} \|_{B_{1, 1}^{1 + 2 \kappa, \varepsilon}} +
       \| \rho \phi_{\varepsilon} \|_{L^{4, \varepsilon}} \| \rho^2
       \phi_{\varepsilon} \|_{H^{1 - 2 \kappa, \varepsilon}})\\
       & \quad  + Q_{\rho} (\mathbb{X}_{\varepsilon}) (\lambda^3 \| \rho \phi_{\varepsilon}
       \|_{L^{4, \varepsilon}} + \lambda\| \rho \phi_{\varepsilon} \|_{L^{4,
       \varepsilon}}^{2 + \theta} \| \rho^2 \phi_{\varepsilon} \|^{1 -
       \theta}_{H^{1 - 2 \kappa, \varepsilon}}) .
     \end{aligned} \]
  Thus
  \[ \begin{aligned}
       \left\| \rho^4 \LL_{\varepsilon} \chi_{\varepsilon} \right\|_{B_{1,
       1}^{- 1 + 3 \kappa, \varepsilon}} & \lesssim  \| \rho^4
       U_{\varepsilon} \|_{B_{1, 1}^{- 1 + 3 \kappa, \varepsilon}}\\
       &  \quad +\lambda \| \rho^{\sigma} X_{\varepsilon}^{\!\resizebox{0.6em}{!}{
\begin{tikzpicture}
\pgfpathmoveto{\pgfqpoint{0cm}{0cm}}
\pgfpathlineto{\pgfqpoint{1.376cm}{0cm}}
\pgfpathlineto{\pgfqpoint{1.376cm}{1.588cm}}
\pgfpathlineto{\pgfqpoint{0cm}{1.588cm}}
\pgfpathclose
\pgfusepath{clip}
\begin{pgfscope}
\begin{pgfscope}
\pgfpathmoveto{\pgfqpoint{0cm}{0cm}}
\pgfpathlineto{\pgfqpoint{1.376cm}{0cm}}
\pgfpathlineto{\pgfqpoint{1.376cm}{1.588cm}}
\pgfpathlineto{\pgfqpoint{0cm}{1.588cm}}
\pgfpathclose
\pgfusepath{clip}
\begin{pgfscope}
\begin{pgfscope}
\definecolor{eps2pgf_color}{gray}{0.976471}\pgfsetstrokecolor{eps2pgf_color}\pgfsetfillcolor{eps2pgf_color}
\pgfpathmoveto{\pgfqpoint{0cm}{0cm}}
\pgfpathlineto{\pgfqpoint{1.376cm}{0cm}}
\pgfpathlineto{\pgfqpoint{1.376cm}{1.588cm}}
\pgfpathlineto{\pgfqpoint{0cm}{1.588cm}}
\pgfpathclose
\pgfusepath{fill}
\end{pgfscope}
\begin{pgfscope}
\pgfsetdash{}{0cm}
\pgfsetlinewidth{0.818mm}
\pgfsetroundcap
\pgfsetroundjoin
\pgfsetmiterlimit{7.0}
\definecolor{eps2pgf_color}{gray}{0}\pgfsetstrokecolor{eps2pgf_color}\pgfsetfillcolor{eps2pgf_color}
\pgfpathmoveto{\pgfqpoint{0.117cm}{1.476cm}}
\pgfpathlineto{\pgfqpoint{0.682cm}{0.726cm}}
\pgfpathlineto{\pgfqpoint{1.246cm}{1.476cm}}
\pgfusepath{stroke}
\end{pgfscope}
\definecolor{eps2pgf_color}{gray}{0}\pgfsetstrokecolor{eps2pgf_color}\pgfsetfillcolor{eps2pgf_color}
\pgfpathmoveto{\pgfqpoint{0.273cm}{1.451cm}}
\pgfpathcurveto{\pgfqpoint{0.273cm}{1.487cm}}{\pgfqpoint{0.259cm}{1.522cm}}{\pgfqpoint{0.233cm}{1.547cm}}
\pgfpathcurveto{\pgfqpoint{0.207cm}{1.573cm}}{\pgfqpoint{0.173cm}{1.588cm}}{\pgfqpoint{0.137cm}{1.588cm}}
\pgfpathcurveto{\pgfqpoint{0.1cm}{1.588cm}}{\pgfqpoint{0.066cm}{1.573cm}}{\pgfqpoint{0.04cm}{1.547cm}}
\pgfpathcurveto{\pgfqpoint{0.014cm}{1.522cm}}{\pgfqpoint{0cm}{1.487cm}}{\pgfqpoint{0cm}{1.451cm}}
\pgfpathcurveto{\pgfqpoint{0cm}{1.414cm}}{\pgfqpoint{0.014cm}{1.379cm}}{\pgfqpoint{0.04cm}{1.354cm}}
\pgfpathcurveto{\pgfqpoint{0.066cm}{1.328cm}}{\pgfqpoint{0.1cm}{1.314cm}}{\pgfqpoint{0.137cm}{1.314cm}}
\pgfpathcurveto{\pgfqpoint{0.173cm}{1.314cm}}{\pgfqpoint{0.207cm}{1.328cm}}{\pgfqpoint{0.233cm}{1.354cm}}
\pgfpathcurveto{\pgfqpoint{0.259cm}{1.379cm}}{\pgfqpoint{0.273cm}{1.414cm}}{\pgfqpoint{0.273cm}{1.451cm}}
\pgfusepath{fill}
\pgfpathmoveto{\pgfqpoint{1.345cm}{1.426cm}}
\pgfpathcurveto{\pgfqpoint{1.345cm}{1.463cm}}{\pgfqpoint{1.331cm}{1.497cm}}{\pgfqpoint{1.305cm}{1.523cm}}
\pgfpathcurveto{\pgfqpoint{1.28cm}{1.549cm}}{\pgfqpoint{1.245cm}{1.563cm}}{\pgfqpoint{1.209cm}{1.563cm}}
\pgfpathcurveto{\pgfqpoint{1.172cm}{1.563cm}}{\pgfqpoint{1.138cm}{1.549cm}}{\pgfqpoint{1.112cm}{1.523cm}}
\pgfpathcurveto{\pgfqpoint{1.087cm}{1.497cm}}{\pgfqpoint{1.072cm}{1.463cm}}{\pgfqpoint{1.072cm}{1.426cm}}
\pgfpathcurveto{\pgfqpoint{1.072cm}{1.39cm}}{\pgfqpoint{1.087cm}{1.355cm}}{\pgfqpoint{1.112cm}{1.329cm}}
\pgfpathcurveto{\pgfqpoint{1.138cm}{1.304cm}}{\pgfqpoint{1.172cm}{1.289cm}}{\pgfqpoint{1.209cm}{1.289cm}}
\pgfpathcurveto{\pgfqpoint{1.245cm}{1.289cm}}{\pgfqpoint{1.28cm}{1.304cm}}{\pgfqpoint{1.305cm}{1.329cm}}
\pgfpathcurveto{\pgfqpoint{1.331cm}{1.355cm}}{\pgfqpoint{1.345cm}{1.39cm}}{\pgfqpoint{1.345cm}{1.426cm}}
\pgfusepath{fill}
\begin{pgfscope}
\pgfsetdash{}{0cm}
\pgfsetlinewidth{0.818mm}
\pgfsetroundcap
\pgfsetmiterlimit{4.0}
\pgfpathmoveto{\pgfqpoint{0.682cm}{0.726cm}}
\pgfpathlineto{\pgfqpoint{0.682cm}{0.097cm}}
\pgfusepath{stroke}
\end{pgfscope}
\end{pgfscope}
\end{pgfscope}
\end{pgfscope}
\end{tikzpicture}}} \|_{\CC^{1 - \kappa,
       \varepsilon}} (\lambda \| \rho^{\sigma} \llbracket X_{\varepsilon}^2 \rrbracket
       \|_{\CC^{- 1 - \kappa, \varepsilon}} \| \rho^{4 - 2 \sigma}
       \phi_{\varepsilon} \|_{L^{2, \varepsilon}} + \| \rho^{4 - \sigma}
       U_{\varepsilon} \|_{B_{1, 1}^{- 1 + 3 \kappa, \varepsilon}})\\
       &  \quad + \lambda\| \rho^{\sigma} X_{\varepsilon}^{\!\resizebox{0.6em}{!}{
\begin{tikzpicture}
\pgfpathmoveto{\pgfqpoint{0cm}{0cm}}
\pgfpathlineto{\pgfqpoint{1.376cm}{0cm}}
\pgfpathlineto{\pgfqpoint{1.376cm}{1.588cm}}
\pgfpathlineto{\pgfqpoint{0cm}{1.588cm}}
\pgfpathclose
\pgfusepath{clip}
\begin{pgfscope}
\begin{pgfscope}
\pgfpathmoveto{\pgfqpoint{0cm}{0cm}}
\pgfpathlineto{\pgfqpoint{1.376cm}{0cm}}
\pgfpathlineto{\pgfqpoint{1.376cm}{1.588cm}}
\pgfpathlineto{\pgfqpoint{0cm}{1.588cm}}
\pgfpathclose
\pgfusepath{clip}
\begin{pgfscope}
\begin{pgfscope}
\definecolor{eps2pgf_color}{gray}{0.976471}\pgfsetstrokecolor{eps2pgf_color}\pgfsetfillcolor{eps2pgf_color}
\pgfpathmoveto{\pgfqpoint{0cm}{0cm}}
\pgfpathlineto{\pgfqpoint{1.376cm}{0cm}}
\pgfpathlineto{\pgfqpoint{1.376cm}{1.588cm}}
\pgfpathlineto{\pgfqpoint{0cm}{1.588cm}}
\pgfpathclose
\pgfusepath{fill}
\end{pgfscope}
\begin{pgfscope}
\pgfsetdash{}{0cm}
\pgfsetlinewidth{0.818mm}
\pgfsetroundcap
\pgfsetroundjoin
\pgfsetmiterlimit{7.0}
\definecolor{eps2pgf_color}{gray}{0}\pgfsetstrokecolor{eps2pgf_color}\pgfsetfillcolor{eps2pgf_color}
\pgfpathmoveto{\pgfqpoint{0.117cm}{1.476cm}}
\pgfpathlineto{\pgfqpoint{0.682cm}{0.726cm}}
\pgfpathlineto{\pgfqpoint{1.246cm}{1.476cm}}
\pgfusepath{stroke}
\end{pgfscope}
\definecolor{eps2pgf_color}{gray}{0}\pgfsetstrokecolor{eps2pgf_color}\pgfsetfillcolor{eps2pgf_color}
\pgfpathmoveto{\pgfqpoint{0.273cm}{1.451cm}}
\pgfpathcurveto{\pgfqpoint{0.273cm}{1.487cm}}{\pgfqpoint{0.259cm}{1.522cm}}{\pgfqpoint{0.233cm}{1.547cm}}
\pgfpathcurveto{\pgfqpoint{0.207cm}{1.573cm}}{\pgfqpoint{0.173cm}{1.588cm}}{\pgfqpoint{0.137cm}{1.588cm}}
\pgfpathcurveto{\pgfqpoint{0.1cm}{1.588cm}}{\pgfqpoint{0.066cm}{1.573cm}}{\pgfqpoint{0.04cm}{1.547cm}}
\pgfpathcurveto{\pgfqpoint{0.014cm}{1.522cm}}{\pgfqpoint{0cm}{1.487cm}}{\pgfqpoint{0cm}{1.451cm}}
\pgfpathcurveto{\pgfqpoint{0cm}{1.414cm}}{\pgfqpoint{0.014cm}{1.379cm}}{\pgfqpoint{0.04cm}{1.354cm}}
\pgfpathcurveto{\pgfqpoint{0.066cm}{1.328cm}}{\pgfqpoint{0.1cm}{1.314cm}}{\pgfqpoint{0.137cm}{1.314cm}}
\pgfpathcurveto{\pgfqpoint{0.173cm}{1.314cm}}{\pgfqpoint{0.207cm}{1.328cm}}{\pgfqpoint{0.233cm}{1.354cm}}
\pgfpathcurveto{\pgfqpoint{0.259cm}{1.379cm}}{\pgfqpoint{0.273cm}{1.414cm}}{\pgfqpoint{0.273cm}{1.451cm}}
\pgfusepath{fill}
\pgfpathmoveto{\pgfqpoint{1.345cm}{1.426cm}}
\pgfpathcurveto{\pgfqpoint{1.345cm}{1.463cm}}{\pgfqpoint{1.331cm}{1.497cm}}{\pgfqpoint{1.305cm}{1.523cm}}
\pgfpathcurveto{\pgfqpoint{1.28cm}{1.549cm}}{\pgfqpoint{1.245cm}{1.563cm}}{\pgfqpoint{1.209cm}{1.563cm}}
\pgfpathcurveto{\pgfqpoint{1.172cm}{1.563cm}}{\pgfqpoint{1.138cm}{1.549cm}}{\pgfqpoint{1.112cm}{1.523cm}}
\pgfpathcurveto{\pgfqpoint{1.087cm}{1.497cm}}{\pgfqpoint{1.072cm}{1.463cm}}{\pgfqpoint{1.072cm}{1.426cm}}
\pgfpathcurveto{\pgfqpoint{1.072cm}{1.39cm}}{\pgfqpoint{1.087cm}{1.355cm}}{\pgfqpoint{1.112cm}{1.329cm}}
\pgfpathcurveto{\pgfqpoint{1.138cm}{1.304cm}}{\pgfqpoint{1.172cm}{1.289cm}}{\pgfqpoint{1.209cm}{1.289cm}}
\pgfpathcurveto{\pgfqpoint{1.245cm}{1.289cm}}{\pgfqpoint{1.28cm}{1.304cm}}{\pgfqpoint{1.305cm}{1.329cm}}
\pgfpathcurveto{\pgfqpoint{1.331cm}{1.355cm}}{\pgfqpoint{1.345cm}{1.39cm}}{\pgfqpoint{1.345cm}{1.426cm}}
\pgfusepath{fill}
\begin{pgfscope}
\pgfsetdash{}{0cm}
\pgfsetlinewidth{0.818mm}
\pgfsetroundcap
\pgfsetmiterlimit{4.0}
\pgfpathmoveto{\pgfqpoint{0.682cm}{0.726cm}}
\pgfpathlineto{\pgfqpoint{0.682cm}{0.097cm}}
\pgfusepath{stroke}
\end{pgfscope}
\end{pgfscope}
\end{pgfscope}
\end{pgfscope}
\end{tikzpicture}}} \|_{\CC^{1 - \kappa,
       \varepsilon}} \| \rho^{4 - \sigma} \phi_{\varepsilon} \|_{H^{1 - 2
       \kappa, \varepsilon}}\end{aligned}
       \]
       \[\begin{aligned}
      \leqslant  &   C_{\lambda}| \log t | ( Q_{\rho} (\mathbb{X}_{\varepsilon}) + \|
       \rho^2 \phi_{\varepsilon} \|_{H^{1 - 2 \kappa, \varepsilon}})  +C_{\lambda}Q_{\rho} (\mathbb{X}_{\varepsilon}) \\
       &  + C_{\lambda} Q_{\rho} (\mathbb{X}_{\varepsilon}) (\| \rho^2
       \phi_{\varepsilon} \|_{H^{1 - 2 \kappa, \varepsilon}} + \| \rho^{4 - 2
       \sigma} \chi_{\varepsilon} \|_{B_{1, 1}^{1 + 2 \kappa, \varepsilon}} +
       \| \rho \phi_{\varepsilon} \|_{L^{4, \varepsilon}} \| \rho^2
       \phi_{\varepsilon} \|_{H^{1 - 2 \kappa, \varepsilon}})\\
       &  + C_{\lambda}Q_{\rho} (\mathbb{X}_{\varepsilon}) (\| \rho \phi_{\varepsilon}
       \|_{L^{4, \varepsilon}} + \| \rho \phi_{\varepsilon} \|_{L^{4,
       \varepsilon}}^{2 + \theta} \| \rho^2 \phi_{\varepsilon} \|^{1 -
       \theta}_{H^{1 - 2 \kappa, \varepsilon}}) \\
     \end{aligned} 
     \]     
Using repeatedly the Young inequality and also~\eqref{eq:17} we obtain
  \[ \begin{aligned}
     \left\| \rho^4 \LL_{\varepsilon} \chi_{\varepsilon} \right\|_{B_{1,
       1}^{- 1 + 3 \kappa, \varepsilon}} & \leqslant  C_{\lambda} (1+| \log t | +|\log t|^{2})Q_{\rho}
     (\mathbb{X}_{\varepsilon})+\lambda \| \rho \phi_{\varepsilon} \|_{L^{4, \varepsilon}}^4 +
     \| \rho^2 \phi_{\varepsilon} \|_{H^{1 - 2 \kappa, \varepsilon}}^2\\
     &  \quad + C_{\lambda }Q_{\rho} (\mathbb{X}_{\varepsilon}) \|
     \rho^{4 - 2 \sigma} \chi_{\varepsilon} \|_{B_{1, 1}^{1 + 2 \kappa,
     \varepsilon}}.
   \end{aligned} \]   
This bound, together with the energy estimate from Theorem~\ref{th:energy-estimate} imply
  \[ \left\| \rho^4 \LL_{\varepsilon} \chi_{\varepsilon} \right\|_{L_T^1 B_{1,
     1}^{- 1 + 3 \kappa, \varepsilon}} \leqslant C_{T,m^2,\lambda} Q_{\rho}
     (\mathbb{X}_{\varepsilon})  (1+ \| \rho^{4 - 2 \sigma} \chi_{\varepsilon}
     \|_{L^1_T B_{1, 1}^{1 + 2 \kappa, \varepsilon}}) . \]
  By interpolation, embedding and the bound {\eqref{eq:chi}} we obtain for
  $\theta = \frac{1 + 3 \kappa}{1 + 4 \kappa}$ (and under the condition that
  $\kappa, \sigma, \iota \in (0, 1)$ were chosen such that $\theta \leqslant
  \frac{2 - 3 \sigma - 2 \iota}{2 - \sigma - 2 \iota}$) that
  \[ \| \rho^{4 - 2 \sigma} \chi_{\varepsilon} \|_{L^1_T B_{1, 1}^{1 + 2
     \kappa, \varepsilon}} \lesssim \int_0^T \| \rho^{2 + \sigma + 2 \iota}
     \chi_{\varepsilon} (t) \|_{B^{- \kappa, \varepsilon}_{1, 1}}^{1 - \theta}
     \| \rho^4 \chi_{\varepsilon} (t) \|_{B^{1 + 3 \kappa, \varepsilon}_{1,
     1}}^{\theta} \mathd t \]
  \[ \lesssim \int_0^T \| \rho^{2 + \sigma} \chi_{\varepsilon} (t) \|_{L^{2,
     \varepsilon}}^{1 - \theta} \| \rho^4 \chi_{\varepsilon} (t) \|_{B^{1 + 3
     \kappa, \varepsilon}_{1, 1}}^{\theta} \mathd t \lesssim \| \rho^{2 +
     \sigma} \chi_{\varepsilon} (t) \|_{L^{\infty}_T L^{2, \varepsilon}}^{1 -
     \theta} \int_0^T \| \rho^4 \chi_{\varepsilon} (t) \|_{B^{1 + 3 \kappa,
     \varepsilon}_{1, 1}}^{\theta} \mathd t \]
  \[ \lesssim C_{T,\lambda} Q_{\rho} (\mathbb{X}_{\varepsilon}) (1 +\| \rho^2 \phi_{\varepsilon} (0)\|^{1 - \theta}_{L^{2, \varepsilon}}) \int_0^T \| \rho^4
     \chi_{\varepsilon} (t) \|_{B^{1 + 3 \kappa, \varepsilon}_{1, 1}}^{\theta}
     \mathd t .\]
  Consequently,
  \[ \left\| \rho^4 \LL_{\varepsilon} \chi_{\varepsilon} \right\|_{L_T^1 B_{1,
     1}^{- 1 + 3 \kappa, \varepsilon}} \leqslant C_{T,m^2,\lambda} Q_{\rho}
     (\mathbb{X}_{\varepsilon}) \]
     \[ \qquad +C_{T,\lambda} Q_{\rho} (\mathbb{X}_{\varepsilon}) ( 1+\|
     \rho^2 \phi_{\varepsilon} (0)\|^{1 - \theta}_{L^{2, \varepsilon}})
     \int_0^T \| \rho^4 \chi_{\varepsilon} (t) \|_{B^{1 + 3 \kappa,
     \varepsilon}_{1, 1}}^{\theta} \mathd t \]
  \[ \leqslant C_{T,m^{2},\lambda,\delta}  Q_{\rho} (\mathbb{X}_{\varepsilon}) (1 +\| \rho^2 \phi_{
     \varepsilon} (0)\|_{L^{2, \varepsilon}}) + \delta \| \rho^4
     \chi_{\varepsilon} \|_{L^1_T B_{1, 1}^{1 + 3 \kappa, \varepsilon}}, \]
  which finally leads to
  \[ 
  \begin{aligned}
  \| \rho^4 \chi_{\varepsilon} \|_{L^1_T B_{1, 1}^{1 + 3 \kappa,
     \varepsilon}} & \lesssim \| \rho^4 \chi_{\varepsilon} (0) \|_{B_{1, 1}^{- 1
     + 3 \kappa, \varepsilon}}  +C_{T,m^{2},\lambda}  Q_{\rho} (\mathbb{X}_{\varepsilon}) (1 +\| \rho^2 \phi_{
     \varepsilon} (0)\|_{L^{2, \varepsilon}})
  \end{aligned}
    \]
  by Lemma~\ref{lem:reg} and since $\chi_{\varepsilon} (0) = \phi_{\varepsilon}
  (0)$ and $L^{2, \varepsilon} (\rho^2) \subset B_{1, 1}^{- 1 + 3 \kappa,
  \varepsilon} (\rho^4)$, the claim follows.
\hspace*{\fill}$\Box$\medskip


\newcommand{\etalchar}[1]{$^{#1}$}

\end{document}